\DeclareFontFamily{OT1}{pzc}{}
\DeclareFontShape{OT1}{pzc}{m}{it}{<-> s * [1.10] pzcmi7t}{}
\DeclareMathAlphabet{\mathpzc}{OT1}{pzc}{m}{it}
\crefname{figure}{fig.}{fig.}
\Crefname{figure}{Fig.}{Fig.}
\Crefname{section}{Sec.}{Sec.}
\definecolor{light-gray}{gray}{0.85}
\newcolumntype{L}{>{$}l<{$}}
\newcolumntype{R}{>{$}r<{$}}
\newcommand{\varlevel}{\textsc{v}\xspace}
\newcommand{\codelevel}{\textsc{c}\xspace}
\newcommand{\proglevel}{\textsc{d}\xspace}
\newcommand{\metalevel}{\textsc{m}\xspace}
\newcommand{\mltt}{\textsc{MLTT}\xspace}
\newcommand{\delamlang}{\textsc{DeLaM}\xspace}
\newtheorem{law}{Law}[section]
\theoremstyle{remark}
\newtheorem*{remark}{Remark}
\def\namedlabel#1#2{\begingroup
   \def\@currentlabel{#2}%
   \label{#1}\endgroup
}
\newcommand{\customlabel}[2]{%
   \protected@write \@auxout {}{\string \newlabel {#1}{{#2}{\thepage}{#2}{#1}{}} }%
   \hypertarget{#1}{}
}
\DeclareFontFamily{U}{MnSymbolA}{}
\DeclareFontShape{U}{MnSymbolA}{m}{n}{
    <-6>  MnSymbolA5
   <6-7>  MnSymbolA6
   <7-8>  MnSymbolA7
   <8-9>  MnSymbolA8
   <9-10> MnSymbolA9
  <10-12> MnSymbolA10
  <12->   MnSymbolA12}{}
\DeclareFontShape{U}{MnSymbolA}{b}{n}{
    <-6>  MnSymbolA-Bold5
   <6-7>  MnSymbolA-Bold6
   <7-8>  MnSymbolA-Bold7
   <8-9>  MnSymbolA-Bold8
   <9-10> MnSymbolA-Bold9
  <10-12> MnSymbolA-Bold10
  <12->   MnSymbolA-Bold12}{}
\DeclareSymbolFont{MnSyA}{U}{MnSymbolA}{m}{n}
\DeclareRobustCommand{\overleftharpoon}{\mathpalette{\overarrow@\leftharpoonfill@}}
\DeclareRobustCommand{\overrightharpoon}{\mathpalette{\overarrow@\rightharpoonfill@}}
\def\leftharpoonfill@{\arrowfill@\leftharpoondown\mn@relbar\mn@relbar}
\def\rightharpoonfill@{\arrowfill@\mn@relbar\mn@relbar\rightharpoonup}
\DeclareMathSymbol{\leftharpoondown}{\mathrel}{MnSyA}{'112}
\DeclareMathSymbol{\rightharpoonup}{\mathrel}{MnSyA}{'100}
\DeclareMathSymbol{\mn@relbar}{\mathrel}{MnSyA}{'320}
\DeclareFontFamily{U}{mathx}{}
\DeclareFontShape{U}{mathx}{m}{n}{<-> mathx10}{}
\DeclareSymbolFont{mathx}{U}{mathx}{m}{n}
\DeclareMathAccent{\widecheck}{0}{mathx}{"71}
\newcommand{\vect}[1]{\overrightarrow{#1}}
\newcommand{\vGamma}{\vect \Gamma}
\newcommand{\Trm}{\textsf{Trm}}
\newcommand{\To}{\Longrightarrow}
\newcommand{\STo}{\Rightarrow}
\newcommand{\D}{\ensuremath{\mathcal{D}}}
\newcommand{\E}{\ensuremath{\mathcal{E}}}
\newcommand{\F}{\ensuremath{\mathcal{F}}}
\newcommand{\Ac}{\ensuremath{\mathcal{A}}}
\newcommand{\Bc}{\ensuremath{\mathcal{B}}}
\newcommand{\Cc}{\ensuremath{\mathcal{C}}}
\newcommand{\Nat}{\ensuremath{\texttt{Nat}}}
\newcommand{\tbox}{\ensuremath{\texttt{box}}\xspace}
\newcommand{\tletbox}{\ensuremath{\texttt{letbox}}}
\newcommand{\ze}{\textsf{zero}}
\newcommand{\tsucc}{\textsf{succ}}
\newcommand{\su}[1]{\tsucc~#1}
\newcommand{\tapp}{\textsf{app}}
\newcommand{\Nf}{\textsf{Nf}}
\newcommand{\Ne}{\textsf{Ne}}
\newcommand{\wk}{\textsf{wk}}
\newcommand{\Ctx}{\textsf{Ctx}}
\newcommand{\Typ}{\textsf{Typ}}
\newcommand{\Exp}{\textsf{Exp}}
\newcommand{\N}{\mathbb{N}}
\newcommand{\boxit}[1]{\tbox\ #1}
\newcommand{\letbox}[3]{\tletbox\ {#1} \shortleftarrow #2\ \texttt{in}\ #3}
\newcommand{\telimn}{\textsf{elim}_{\Nat}}
\newcommand\ELIMN[5]{\ensuremath{\telimn^{#1}~(#2)~#3~(#4)~#5}}
\newcommand{\countf}[1]{\ensuremath{\textsf{count}(#1)}}
\newcommand{\flatten}[2]{\ensuremath{\textsf{flatten}_{#1}(#2)}}
\newcommand{\mergef}[2]{\ensuremath{\textsf{merge}(#1, #2)}}
\newcommand{\ttypeof}{\ensuremath{\Uparrow}}
\newcommand{\typeof}[1]{\ensuremath{\Uparrow(#1)}}
\newcommand{\compt}[1]{\ensuremath{#1~\textsf{computable}}}
\newcommand{\sep}{\;|\;}
\newcommand{\func}{\longrightarrow}
\newcommand\id{\ensuremath{\textsf{id}}}
\newcommand{\Se}{\texttt{Ty}}
\newcommand{\Ty}[1]{\ensuremath{\Se_{#1}}}
\newcommand{\tEl}{\texttt{El}}
\newcommand{\Elt}[2]{\ensuremath{\tEl^{#1}~#2}}
\newcommand{\Level}{\ensuremath{\textsf{Level}}}
\newcommand\El{\textbf{El}}
\newcommand{\PI}[5]{\ensuremath{\Pi^{#1,#2}(#3 : #4). #5}}
\newcommand{\LAM}[5]{\ensuremath{\lambda^{#1,#2}(#3 : #4). #5}}
\newcommand{\APP}[7]{\ensuremath{(#1 : \PI{#2}{#3}{#4}{#5}{#6})~#7}}
\newcommand{\UPI}[3]{\ensuremath{\vect{#1} \STo^{#2} #3}}
\newcommand{\ULAM}[3]{\ensuremath{\Lambda^{#1}~{\vect{#2}}. #3}}
\newcommand{\UAPP}[2]{\ensuremath{#1~\$~\vect{#2}}}
\newcommand{\LETBOX}[6]{\ensuremath{\tletbox^{#1}_{#2}~#3 \leftarrow (#4 : #5)~\textsf{in}~#6}}
\newcommand{\ELIMC}[6]{\ensuremath{\textsf{elim}^{#1,#2}~{#3}~#4~(#5 : #6)}}
\newcommand{\LETBTYP}[7]{\LETBOX{#1}{#4}{#5}{#7}{\CTyp[#3]{#2}}{#6}}
\newcommand{\LETBTRM}[8]{\LETBOX{#1}{#5}{#6}{#8}{\CTrm[#3]{#4}{#2}}{#7}}
\newcommand{\ELIMTYPn}[7]{\ELIMC{#1}{#2}{#3}{#4}{#7}{\CTyp[#6]{#5}}}
\newcommand{\ELIMTRMn}[8]{\ELIMC{#1}{#2}{#3}{#4}{#8}{\CTrm[#6]{#7}{#5}}}
\newcommand{\ELIMTYP}[7]{\ELIMTYPn{#1}{#2}{\vect{#3}}{\vect{#4}}{#5}{#6}{#7}}
\newcommand{\ELIMTRM}[8]{\ELIMTRMn{#1}{#2}{\vect{#3}}{\vect{#4}}{#5}{#6}{#7}{#8}}
\newcommand{\redd}{\ensuremath{\rightsquigarrow}}
\newcommand{\reds}{\ensuremath{\rightsquigarrow^\ast}}
\newcommand\tand{\text{ and }}
\newcommand\byIH{\tag{by IH}}
\newcommand{\at}[1]{\scalebox{0.75}{\color{gray}$\,@~#1$}}
\newcommand{\vertrule}[1][1ex]{\rule{.45pt}{#1}}
\newcommand{\tttstileA}{\ensuremath{\mathrel{\raisebox{0.01pt}{\vertrule[1.6ex]}\hspace{.15em}{\vertrule[1.6ex]}{\hspace{.1em}\vDash}}}}
\newcommand{\sttstileA}{\ensuremath{\mathrel{\raisebox{0.01pt}{\vertrule[1.6ex]}{\hspace{.1em}\vDash}}}}
\DeclareDocumentCommand{\genjudge}{ m m m } {
  \ensuremath{#1 #2 #3}
}
\DeclareDocumentCommand{\singlejudge}{ o m m } {
  \IfNoValueTF {#1}
  {\genjudge{\Gamma}{#2}{#3}}
  {\genjudge{#1}{#2}{#3}}
}
\DeclareDocumentCommand{\dualjudge}{o o m m m } {
  \IfNoValueTF {#1}
  {\IfNoValueTF {#2}
    {\genjudge{\Psi #3 \Gamma}{#4}{#5}}
    {\genjudge{\Psi #3 #2}{#4}{#5}}}
  {\IfNoValueTF {#2}
    {\genjudge{#1 #3 \Gamma}{#4}{#5}}
    {\genjudge{#1 #3 #2}{#4}{#5}}}
}
\DeclareDocumentCommand{\triplejudge}{o o o m m } {
  \IfNoValueTF {#1}
  {\IfNoValueTF {#2}
    {\IfNoValueTF {#3}
      {\genjudge{L \sep \Psi; \Gamma}{#4}{#5}}
      {\genjudge{#3 \sep \Psi; \Gamma}{#4}{#5}}}
    {\IfNoValueTF {#3}
      {\genjudge{L \sep \Psi; #2}{#4}{#5}}
      {\genjudge{#3 \sep \Psi; #2}{#4}{#5}}}}
  {\IfNoValueTF {#2}
    {\IfNoValueTF {#3}
      {\genjudge{L \sep #1; \Gamma}{#4}{#5}}
      {\genjudge{#3 \sep #1; \Gamma}{#4}{#5}}}
    {\IfNoValueTF {#3}
      {\genjudge{L \sep #1; #2}{#4}{#5}}
      {\genjudge{#3 \sep #1; #2}{#4}{#5}}}}  
}
\DeclareDocumentCommand{\dcjudge}{o o m m } {
  \dualjudge[#1][#2]{;}{#3}{#4}
}
\DeclareDocumentCommand{\dSemjudgebf}{ o o m } {
  \dcjudge[#1][#2]{\sttstileA}{#3}
}
\DeclareDocumentCommand{\judge}{ o m } {
  \singlejudge[#1]{\vdash}{#2}
}
\DeclareDocumentCommand{\gequiv}{ o m m } {
  \IfNoValueTF {#1}
  {\judge[L]{#2 \approx #3}}
  {\judge[#1]{#2 \approx #3}}
}
\DeclareDocumentCommand{\cont}{ o m } {
  \square (\judge[#1]{#2})
}
\DeclareDocumentCommand{\rjudge}{ o m } {
  \IfNoValueTF {#1}
  {\singlejudge[\vGamma]{\vdash_r}{#2}}
  {\singlejudge[#1]{\vdash_r}{#2}}
}
\DeclareDocumentCommand{\semjudge}{ o m } {
  \singlejudge[#1]{\vDash}{#2}
}
\DeclareDocumentCommand{\Semjudge}{ o m } {
  \singlejudge[#1]{\Vdash}{#2}
}
\DeclareDocumentCommand{\mjudge}{ o m } {
  \IfNoValueTF {#1}
  {\singlejudge[\vGamma]{\vdash}{#2}}
  {\singlejudge[#1]{\vdash}{#2}}
}
\DeclareDocumentCommand{\djudge}{ o o m } {
  \dualjudge[#1][#2]{;}{\vdash}{#3}
}
\DeclareDocumentCommand{\ljudge}{ o m m } {
  \IfNoValueTF {#1}
  {\genjudge{\Psi}{\vdash_{#2}}{#3}}
  {\genjudge{#1}{\vdash_{#2}}{#3}}
}
\DeclareDocumentCommand{\ljudgel}{ o m m } {
  \IfNoValueTF {#1}
  {\genjudge{\Gamma}{\vdash_{#2}}{#3}}
  {\genjudge{#1}{\vdash_{#2}}{#3}}
}
\DeclareDocumentCommand{\DTyp}{o m m} {
  \ensuremath{(\ljudgel[#1]{#2}{\at{#3}})}
}
\DeclareDocumentCommand{\DTrm}{o m m m} {
  \ensuremath{(\ljudgel[#1]{#2}{#3\at{#4}})}
}
\DeclareDocumentCommand{\CTyp}{o m} {
  \ensuremath{\square \DTyp[#1]{\codelevel}{#2}}
}
\DeclareDocumentCommand{\CTrm}{o m m} {
  \ensuremath{\square \DTrm[#1]{\codelevel}{#2}{#3}}
}
\DeclareDocumentCommand{\TPI}{m o m m m}{
  \ensuremath{(#1 : \DTyp[#2]{\proglevel}{#3}) \STo^{#4} #5}
}
\newcommand{\TLAM}[4]{\ensuremath{\Lambda^{#1,#2}_p~{#3}. #4}}
\newcommand{\TAPP}[2]{\ensuremath{#1~\$_p~#2}}
\newcommand{\CPI}[3]{\ensuremath{(#1 : \Ctx) \STo^{#2} #3}}
\newcommand{\CLAM}[3]{\ensuremath{\Lambda^{#1}~{#2}. #3}}
\newcommand{\CAPP}[2]{\ensuremath{#1~\$~#2}}
\DeclareDocumentCommand{\lpgenjudge}{ o o m m } {
  \IfNoValueTF {#1}
  {\IfNoValueTF {#2}
    {\genjudge{L \sep \Psi}{#3}{#4}}
    {\genjudge{#2 \sep \Psi}{#3}{#4}}}
  {\IfNoValueTF {#2}
    {\genjudge{L \sep #1}{#3}{#4}}
    {\genjudge{#2 \sep #1}{#3}{#4}}}
}
\DeclareDocumentCommand{\pjudge}{ o o m } {
  \lpgenjudge[#1][#2]{\vdash}{#3}
}
\DeclareDocumentCommand{\ptyping}{o o m m } {
  \pjudge[#1][#2]{#3 : #4}
}
\DeclareDocumentCommand{\ptyequiv}{o o m m m } {
  \pjudge[#1][#2]{#3 \approx #4 : #5}
}
\DeclareDocumentCommand{\lpjudge}{ o o m m } {
  \pjudge[#1][#2]{_{#3}#4}
}
\DeclareDocumentCommand{\lpequiv}{ o o m m m } {
  \lpjudge[#1][#2]{#3}{#4 \approx #5}
}
\DeclareDocumentCommand{\lpcobj}{ o o m m m } {
  \lpjudge[#1][#2]{#3}{#4 : #5}
}
\DeclareDocumentCommand{\tjudge}{ o o o m } {
  \triplejudge[#1][#2][#3]{\vdash}{#4}
}
\DeclareDocumentCommand{\ltjudge}{o o o m m} {
  \tjudge[#1][#2][#3]{_{#4} #5}
}
\DeclareDocumentCommand{\lttypwf}{o o o m m m} {
  \ltjudge[#1][#2][#3]{#4}{#5 \at{#6}}
}
\DeclareDocumentCommand{\lttyping}{o o o m m m m} {
  \ltjudge[#1][#2][#3]{#4}{#5 : #6 \at{#7}}
}
\DeclareDocumentCommand{\lttypingd}{o o o m m m m} {
  \ltjudge[#1][#2][#3]{#4}{#5 : \Ty{#6}\;(\at{#7})}
}
\DeclareDocumentCommand{\ltsubst}{o o o m m m} {
  \ltjudge[#1][#2][#3]{#4}{#5 : #6}
}
\DeclareDocumentCommand{\ltsubstv}{o o o m m m} {
  \ltjudge[#1][#2][#3]{#4}{^{\#} #5 : #6}
}
\DeclareDocumentCommand{\ltsubeq}{o o o m m m m} {
  \ltjudge[#1][#2][#3]{#4}{#5 \approx #6 : #7}
}
\DeclareDocumentCommand{\lttypeq}{o o o m m m m} {
  \ltjudge[#1][#2][#3]{#4}{#5 \approx #6 \at{#7}}
}
\DeclareDocumentCommand{\lttyequiv}{o o o m m m m m} {
  \ltjudge[#1][#2][#3]{#4}{#5 \approx #6 : #7 \at{#8}}
}
\DeclareDocumentCommand{\lttyequivv}{o o o m m m m m} {
  \ltjudge[#1][#2][#3]{#4}{^{\#} #5 \approx #6 : #7 \at{#8}}
}
\DeclareDocumentCommand{\lttyequivd}{o o o m m m m m} {
  \ltjudge[#1][#2][#3]{#4}{#5 \approx #6 : \Ty{#7}\;(\at{#8})}
}
\DeclareDocumentCommand{\lsemjudge}{ o m m } {
  \IfNoValueTF {#1}
  {\genjudge{\Psi}{\sttstileA_{#2}}{#3}}
  {\genjudge{#1}{\sttstileA_{#2}}{#3}}
}
\DeclareDocumentCommand{\lsemvjudge}{ o m m } {
  \IfNoValueTF {#1}
  {\genjudge{\Psi}{\Vdash_{#2}}{#3}}
  {\genjudge{#1}{\Vdash_{#2}}{#3}}
}
\DeclareDocumentCommand{\lmjudge}{ o o m m } {
  \djudge[#1][#2]{_{#3} #4}
}
\DeclareDocumentCommand{\dpjudge}{ o o m } {
  \lpgenjudge[#1][#2]{\sttstileA}{#3}
}
\DeclareDocumentCommand{\dtjudge}{ o o o m } {
  \triplejudge[#1][#2][#3]{\tttstileA}{#4}
}
\DeclareDocumentCommand{\ldpjudge}{ o o m m m } {
  \dpjudge[#1][#2]{_{#3}^{#4} #5}
}
\DeclareDocumentCommand{\ldjudge}{ o o o m m m } {
  \dtjudge[#1][#2][#3]{_{#4}^{#5} #6}
}
\DeclareDocumentCommand{\dpSjudge}{ o o m } {
  \lpgenjudge[#1][#2]{\sttstileA}{#3}
}
\DeclareDocumentCommand{\dSjudge}{ o o o m } {
  \triplejudge[#1][#2][#3]{\sttstileA}{#4}
}
\DeclareDocumentCommand{\dsjudge}{ o o o m } {
  \triplejudge[#1][#2][#3]{\Vdash}{#4}
}
\DeclareDocumentCommand{\ldpSjudge}{ o o m m m } {
  \dpSjudge[#1][#2]{_{#3}^{#4} #5}
}
\DeclareDocumentCommand{\ldSjudge}{ o o o m m m } {
  \dSjudge[#1][#2][#3]{_{#4}^{#5} #6}
}
\DeclareDocumentCommand{\dsemjudge}{ o o m } {
  \dualjudge[#1][#2]{;}{\sttstileA}{#3}
}
\DeclareDocumentCommand{\dSemjudge}{ o o m } {
  \dualjudge[#1][#2]{;}{\Vdash}{#3}
}
\DeclareDocumentCommand{\wmjudge}{ o m } {
  \IfNoValueTF {#1}
  {\vGamma \vdash_{\! w} #2}
  {#1 \vdash_{\! w} #2}
}
\DeclareDocumentCommand{\msemjudge}{ o m } {
  \IfNoValueTF {#1}
  {\vGamma \vDash #2}
  {#1 \vDash #2}
}
\DeclareDocumentCommand{\mSemjudge}{ o m } {
  \IfNoValueTF {#1}
  {\vGamma \Vdash #2}
  {#1 \Vdash #2}
}
\DeclareDocumentCommand{\typing}{ o m m } {
  \judge[#1]{#2 : #3}
}
\DeclareDocumentCommand{\rtyping}{ o m m } {
  \rjudge[#1]{#2 : #3}
}
\DeclareDocumentCommand{\semtyp}{ o m m } {
  \semjudge[#1]{#2 : #3}
}
\DeclareDocumentCommand{\semvtyp}{ o m m } {
  \IfNoValueTF {#1}
  {\semtyp[\Psi]{ #2}{#3}}
  {\semtyp[#1]{ #2}{#3}}
}
\DeclareDocumentCommand{\Semtyp}{ o m m } {
  \Semjudge[#1]{#2 : #3}
}
\DeclareDocumentCommand{\tyequiv}{ o m m m } {
  \judge[#1]{#2 \approx #3 : #4}
}
\DeclareDocumentCommand{\semtyeq}{ o m m m } {
  \semjudge[#1]{#2 \approx #3 : #4}
}
\DeclareDocumentCommand{\semvtyeq}{ o m m m } {
  \IfNoValueTF {#1}
  {\semjudge[\Psi]{ #2 \approx #3 : #4}}
  {\semjudge[#1]{ #2 \approx #3 : #4}}
}
\DeclareDocumentCommand{\lSemtypPrime}{ o o m m m } {
  \dSemjudgebf[#1][#2]{_{#3} #4 : #5}
}
\DeclareDocumentCommand{\mtyping}{ o m m } {
  \mjudge[#1]{#2 : #3}
}
\DeclareDocumentCommand{\lmtyping}{ o m m m } {
  \mjudge[#1]{_{#2} #3 : #4}
}
\DeclareDocumentCommand{\dtyping}{ o o m m } {
  \djudge[#1][#2]{#3 : #4}
}
\DeclareDocumentCommand{\ltyping}{ o o m m m } {
  \djudge[#1][#2]{_{#3} #4 : #5}
}
\DeclareDocumentCommand{\wmtyping}{ o m m } {
  \wmjudge[#1]{#2 : #3}
}
\DeclareDocumentCommand{\mSemtyp}{ o m m } {
  \mSemjudge[#1]{#2 : #3}
}
\DeclareDocumentCommand{\lsemtyp}{ o o m m m } {
  \dsemjudge[#1][#2]{_{#3} #4 : #5}
}
\DeclareDocumentCommand{\lsemvtyp}{ o o m m m } {
  \dSemjudge[#1][#2]{_{#3} #4 : #5}
}
\DeclareDocumentCommand{\lSemtyp}{ o o m m m } {
  \dSemjudge[#1][#2]{_{#3} #4 : #5}
}
\DeclareDocumentCommand{\mrarr}{ o m m } {
  \mjudge[#1]{#2 > #3}
}
\DeclareDocumentCommand{\wmrarr}{ o m m } {
  \wmjudge[#1]{#2 > #3}
}
\DeclareDocumentCommand{\mtyequiv}{ o m m m } {
  \mjudge[#1]{#2 \approx #3 : #4}
}
\DeclareDocumentCommand{\dtyequiv}{ o o m m m } {
  \djudge[#1][#2]{#3 \approx #4 : #5}
}
\DeclareDocumentCommand{\ltyequiv}{ o o m m m m } {
  \djudge[#1][#2]{_{#3} #4 \approx #5 : #6}
}
\DeclareDocumentCommand{\ltygneeq}{ o o m m m } {
  \djudge[#1][#2]{_1 #3 \sim #4 : #5}
}
\DeclareDocumentCommand{\ltygteq}{ o o m m m } {
  \djudge[#1][#2]{_1 #3 \simeq #4 : #5}
}
\DeclareDocumentCommand{\ltyred}{ o o m m m } {
  \djudge[#1][#2]{_1 #3 \rightsquigarrow #4 : #5}
}
\DeclareDocumentCommand{\ltyreds}{ o o m m m } {
  \djudge[#1][#2]{_1 #3 \rightsquigarrow^\ast #4 : #5}
}
\DeclareDocumentCommand{\dtconv}{ o o m m m } {
  \djudge[#1][#2]{_1 #3 \ \hat{\Longleftrightarrow}\ #4 : #5}
}
\DeclareDocumentCommand{\dtconvnf}{ o o m m m } {
  \djudge[#1][#2]{_1 #3 \Longleftrightarrow #4 : #5}
}
\DeclareDocumentCommand{\dtconvne}{ o o m m m } {
  \djudge[#1][#2]{_1 #3 \longleftrightarrow #4 : #5}
}
\DeclareDocumentCommand{\ttypred}{ o o o m m m m } {
  \ltjudge[#1][#2][#3]{#4}{ #5 \rightsquigarrow #6 \at{#7}}
}
\DeclareDocumentCommand{\ttypreds}{ o o o m m m m } {
  \ltjudge[#1][#2][#3]{#4}{ #5 \rightsquigarrow^\ast #6 \at{#7}}
}
\DeclareDocumentCommand{\ttrmred}{ o o o m m m m m } {
  \ltjudge[#1][#2][#3]{#4}{ #5 \rightsquigarrow #6 : #7 \at{#8}}
}
\DeclareDocumentCommand{\ttrmreds}{ o o o m m m m m } {
  \ltjudge[#1][#2][#3]{#4}{ #5 \rightsquigarrow^\ast #6 : #7 \at{#8}}
}
\DeclareDocumentCommand{\tconvtyp}{ o o o m m m m } {
  \ltjudge[#1][#2][#3]{#4}{ #5 \ \hat{\Longleftrightarrow}\ #6 \at{#7}}
}
\DeclareDocumentCommand{\tconvtypnf}{ o o o m m m m } {
  \ltjudge[#1][#2][#3]{#4}{ #5 \Longleftrightarrow #6 \at{#7}}
}
\DeclareDocumentCommand{\tconvtypne}{ o o o m m m m } {
  \ltjudge[#1][#2][#3]{#4}{ #5 \longleftrightarrow #6 \at{#7}}
}
\DeclareDocumentCommand{\tconvctx}{o o m m m}{
  \lpjudge[#1][#2]{#3}{#4 \ \hat{\Longleftrightarrow}\ #5}
}
\DeclareDocumentCommand{\tconvtrm}{ o o o m m m m m } {
  \ltjudge[#1][#2][#3]{#4}{ #5 \ \hat{\Longleftrightarrow}\ #6 : #7 \at{#8}}
}
\DeclareDocumentCommand{\tconvtrmnf}{ o o o m m m m m } {
  \ltjudge[#1][#2][#3]{#4}{ #5 \Longleftrightarrow #6 : #7 \at{#8}}
}
\DeclareDocumentCommand{\tconvtrmne}{ o o o m m m m m } {
  \ltjudge[#1][#2][#3]{#4}{ #5 \ \hat{\longleftrightarrow}\ #6 : #7 \at{#8}}
}
\DeclareDocumentCommand{\tconvtrmnee}{ o o o m m m m m } {
  \ltjudge[#1][#2][#3]{#4}{ #5 \longleftrightarrow #6 : #7 \at{#8}}
}
\DeclareDocumentCommand{\tconvsub}{ o o o m m m m } {
  \ltjudge[#1][#2][#3]{#4}{ #5 \ \hat{\Longleftrightarrow}\ #6 : #7}
}
\DeclareDocumentCommand{\lttypgneeq}{ o o o m m m m } {
  \ltjudge[#1][#2][#3]{#4}{#5 \sim #6 \at{#7}}
}
\DeclareDocumentCommand{\lttypgeq}{ o o o m m m m } {
  \ltjudge[#1][#2][#3]{#4}{#5 \simeq #6 \at{#7}}
}
\DeclareDocumentCommand{\lttrmgneeq}{ o o o m m m m m } {
  \ltjudge[#1][#2][#3]{#4}{#5 \sim #6 : #7 \at{#8}}
}
\DeclareDocumentCommand{\lttrmgeq}{ o o o m m m m m } {
  \ltjudge[#1][#2][#3]{#4}{#5 \simeq #6 : #7 \at{#8}}
}
\DeclareDocumentCommand{\ltctxgeq}{o o m m m}{
  \lpjudge[#1][#2]{#3}{#4 \simeq #5}
}
\DeclareDocumentCommand{\ltsubgeq}{ o o o m m m m } {
  \ltjudge[#1][#2][#3]{#4}{ #5 \simeq #6 : #7}
}
\DeclareDocumentCommand{\lsemtyp}{ o o m m m } {
  \dsemjudge[#1][#2]{_{#3} #4 : #5}
}
\DeclareDocumentCommand{\dsemtyeq}{ o o m m m } {
  \dsemjudge[#1][#2]{#3 \approx #4 : #5}
}
\DeclareDocumentCommand{\ldtypwf}{ o o o m m m m } {
  \ldjudge[#1][#2][#3]{#4}{#5}{#6 \at{#7}}
}
\DeclareDocumentCommand{\ldtypeq}{ o o o m m m m m } {
  \ldjudge[#1][#2][#3]{#4}{#5}{#6 \approx #7 \at{#8}}
}
\DeclareDocumentCommand{\ldtyeqnat}{ o o o m m m } {
  \ldjudge[#1][#2][#3]{#4}{}{#5 \approx #6 : \Nat}
}
\DeclareDocumentCommand{\ldtynfeqnat}{ o o o m m m } {
  \ldjudge[#1][#2][#3]{#4}{}{#5 \simeq #6 : \Nat}
}
\DeclareDocumentCommand{\ldtynfeqm}{ o o o m m m } {
  \ldjudge[#1][#2][#3]{}{}{#4 \simeq #5 : #6}
}
\DeclareDocumentCommand{\ldtyping}{ o o o m m m m m } {
  \ldjudge[#1][#2][#3]{#4}{#5}{#6 : #7 \at{#8}}
}
\DeclareDocumentCommand{\ldtyequivt}{ o o o m m m m m m } {
  \ldjudge[#1][#2][#3]{#4}{#5}{#6 \approx #7 : #8 \at{#9}}
}
\DeclareDocumentCommand{\ldtyequiv}{ o o o m m m m m } {
  \ldjudge[#1][#2][#3]{#4}{#5}{#6 \approx #7 : \tEl(#8)}
}
\DeclareDocumentCommand{\ldctxwf}{ o o m m m } {
  \ldpjudge[#1][#2]{#3}{#4}{#5}
}
\DeclareDocumentCommand{\ldctxeq}{ o o m m m m } {
  \ldpjudge[#1][#2]{#3}{#4}{#5 \approx #6}
}
\DeclareDocumentCommand{\ldsubeq}{ o o o m m m m m } {
  \ldjudge[#1][#2][#3]{#4}{#5}{#6 \approx #7 : #8}
}
\DeclareDocumentCommand{\ldsubst}{ o o o m m m m } {
  \ldjudge[#1][#2][#3]{#4}{#5}{#6 : #7}
}
\DeclareDocumentCommand{\spjudge}{ o o m } {
  \lpgenjudge[#1][#2]{\Vdash}{#3}
}
\DeclareDocumentCommand{\stjudge}{ o o o m } {
  \triplejudge[#1][#2][#3]{\Vdash}{#4}
}
\DeclareDocumentCommand{\lspjudge}{ o o m m } {
  \spjudge[#1][#2]{_{#3} #4}
}
\DeclareDocumentCommand{\lsjudge}{ o o o m m } {
  \stjudge[#1][#2][#3]{_{#4} #5}
}
\DeclareDocumentCommand{\lsgctx}{ o m } {
  \IfNoValueTF {#1}
  {\genjudge{L}{\Vdash}{#2}}
  {\genjudge{#1}{\Vdash}{#2}}
}
\DeclareDocumentCommand{\lstypwf}{ o o o m m m } {
  \lsjudge[#1][#2][#3]{#4}{#5 \at{#6}}
}
\DeclareDocumentCommand{\lstypeq}{ o o o m m m m } {
  \lsjudge[#1][#2][#3]{#4}{#5 \approx #6 \at{#7}}
}
\DeclareDocumentCommand{\lstyping}{ o o o m m m m } {
  \lsjudge[#1][#2][#3]{#4}{#5 : #6 \at{#7}}
}
\DeclareDocumentCommand{\lstyequiv}{ o o o m m m m m } {
  \lsjudge[#1][#2][#3]{#4}{#5 \approx #6 : #7 \at{#8}}
}
\DeclareDocumentCommand{\lsgsubeq}{ o o m m m } {
  \lspjudge[#1][#2]{}{#3 \approx #4 : #5}
}
\DeclareDocumentCommand{\lsgsubst}{ o o m m } {
  \lspjudge[#1][#2]{}{#3 : #4}
}
\DeclareDocumentCommand{\lsctxwf}{ o o m m } {
  \lspjudge[#1][#2]{#3}{#4}
}
\DeclareDocumentCommand{\lsctxeq}{ o o m m m } {
  \lspjudge[#1][#2]{#3}{#4 \approx #5}
}
\DeclareDocumentCommand{\lssubeq}{ o o o m m m m } {
  \lsjudge[#1][#2][#3]{#4}{#5 \approx #6 : #7}
}
\DeclareDocumentCommand{\lssubst}{ o o o m m m } {
  \lsjudge[#1][#2][#3]{#4}{#5 : #6}
}
\DeclareDocumentCommand{\ldStypeq} { o o m m m m m m } {
  \ldSjudge[#1][#3][#2]{#4}{#5}{#6 \approx #7 \at{#8}}
}
\DeclareDocumentCommand{\ldStyequiv} { o o m m m m m m } {
  \ldSjudge[#1][#3][#2]{#4}{#5}{#6 \approx #7 : \tEl(#8)}
}
\DeclareDocumentCommand{\ldStypwfge} { o o o m m m m } {
  \ldjudge[#1][#2][#3]{\ge #4}{#5}{#6 \at{#7}}
}
\DeclareDocumentCommand{\ldStypingge} { o o o m m m m m } {
  \ldjudge[#1][#2][#3]{\ge #4}{#5}{#6 : #7 \at{#8}}
}
\DeclareDocumentCommand{\ldStypeqge} { o o o m m m m m } {
  \ldjudge[#1][#2][#3]{\ge #4}{#5}{#6 \approx #7 \at{#8}}
}
\DeclareDocumentCommand{\ldStyequivge} { o o o m m m m m m } {
  \ldjudge[#1][#2][#3]{\ge #4}{#5}{#6 \approx #7 : #8 \at{#9}}
}
\DeclareDocumentCommand{\ldSctxeqge} { o o m m m m } {
  \ldpSjudge[#1][#2]{\ge #3}{#4}{#5 \approx #6}
}
\DeclareDocumentCommand{\ldSctxwfge} { o o m m m } {
  \ldpSjudge[#1][#2]{\ge #3}{#4}{#5}
}
\DeclareDocumentCommand{\ldSsubeqge} { o o o m m m m m } {
  \ldjudge[#1][#2][#3]{\ge #4}{#5}{#6 \approx #7 : #8}
}
\DeclareDocumentCommand{\ldSsubstge} { o o o m m m m } {
  \ldjudge[#1][#2][#3]{\ge #4}{#5}{#6 : #7}
}
\DeclareDocumentCommand{\ldSctxeq} { o o m m m m } {
  \ldpSjudge[#1][#2]{#3}{#4}{#5 \approx #6}
}
\DeclareDocumentCommand{\ltSjudge}{ o o o m m m } {
  \triplejudge[#1][#2][#3]{\vDash_{#4}^{#5} #6}
}
\DeclareDocumentCommand{\ltStypwf} { o o m m m m m } {
  \ltSjudge[#1][#3][#2]{#4}{#5}{#6 \at{#7}}
}
\DeclareDocumentCommand{\ltStypeq} { o o m m m m m m } {
  \ltSjudge[#1][#3][#2]{#4}{#5}{#6 \approx #7 \at{#8}}
}
\DeclareDocumentCommand{\ltSctxwf} { o m m } {
  \lpgenjudge[#2][#1]{\vDash_{p}^{p}}{#3}
}
\DeclareDocumentCommand{\ltSctxeq} { o m m m } {
  \lpgenjudge[#2][#1]{\vDash_{p}^{p}}{#3 \approx #4}
}
\DeclareDocumentCommand{\ldSsubeq} { o o m m m m m m } {
  \ldSjudge[#1][#3][#2]{#4}{#5}{#6 \approx #7 : #8}
}
\DeclareDocumentCommand{\ldSgctx}{ o m } {
  \IfNoValueTF {#1}
  {\genjudge{L}{\sttstileA}{#2}}
  {\genjudge{#1}{\sttstileA}{#2}}
}
\DeclareDocumentCommand{\ldSgctxeq}{ o m m } {
  \IfNoValueTF {#1}
  {\genjudge{L}{\vDash}{#2 \approx #3}}
  {\genjudge{#1}{\vDash}{#2 \approx #3}}
}
\DeclareDocumentCommand{\ldSgsubst}{ o o m m } {
  \lpgenjudge[#1][#2]{\sttstileA}{#3 : #4}
}
\DeclareDocumentCommand{\ldSgsubeq}{ o o m m m } {
  \lpgenjudge[#1][#2]{\sttstileA}{#3 \approx #4 : #5}
}
\DeclareDocumentCommand{\ldssubst} { o o o m m m  } {
  \dsjudge[#1][#2][#3]{_{#4} #5 : #6}
}
\DeclareDocumentCommand{\ldstypwf} { o o o m m m  } {
  \dsjudge[#1][#2][#3]{_{#4} #5 \at{#6}}
}
\DeclareDocumentCommand{\ldstyping} { o o o m m m m } {
  \dsjudge[#1][#2][#3]{_{#4} #5 : #6 \at{#7}}
}
\DeclareDocumentCommand{\ltsemjudge} { o o o m m m } {
  \triplejudge[#1][#2][#3]{\sttstile{}{}_{#4}^{#5}}{#6}
}
\DeclareDocumentCommand{\ltsemtypwf} { o o o m m m m } {
  \ltsemjudge[#1][#2][#3]{#4}{#5}{#6 \at{#7}}
}
\DeclareDocumentCommand{\ltsemtyping} { o o o m m m m m } {
  \ltsemjudge[#1][#2][#3]{#4}{#5}{#6 : #7 \at{#8}}
}
\DeclareDocumentCommand{\ltsemsubst} { o o o m m m m } {
  \ltsemjudge[#1][#2][#3]{#4}{#5}{#6 : #7}
}
\DeclareDocumentCommand{\lsemtyeqp}{ o o m m m m m } {
  \dsemjudge[#1][#2]{_{#3}^{#4} {#5} \approx #6 : #7}
}
\DeclareDocumentCommand{\lsemtyeq}{ o o m m m m } {
  \dsemjudge[#1][#2]{_1^{#3} {#4} \approx #5 : #6}
}
\DeclareDocumentCommand{\lsemtypingc}{ o o m m } {
  \dsemjudge[#1][#2]{_0^0 {#3} : #4}
}
\DeclareDocumentCommand{\lsemtypingp}{ o o m m } {
  \dualjudge[#1][#2]{;}{\vDash_1^0}{#3 : #4}
}
\DeclareDocumentCommand{\lsemvtyeq}{ o o m m m m } {
  \dSemjudge[#1][#2]{_{#3} #4 \approx #5 : #6}
}
\DeclareDocumentCommand{\msemtyeq}{ o m m m } {
  \msemjudge[#1]{#2 \approx #3 : #4}
}
\DeclareDocumentCommand{\msemtyp}{ o m m } {
  \msemjudge[#1]{#2 : #3}
}
\DeclareDocumentCommand{\mgluty}{ o m o m } {
  \IfNoValueTF {#3}
  {\mjudge[#1]{#2 \; \circledR \; #4}}
  {\mjudge[#1]{#2 \; \circledR_{#3} \; #4}}
}
\DeclareDocumentCommand{\mglutm}{ o m m m o m } {
   \IfNoValueTF {#5}
   {\mjudge[#1]{#2 : #3 \; \circledR \; #4 \in \El(#6)}}
   {\mjudge[#1]{#2 : #3 \; \circledR_{#5} \; #4 \in \El_{#5}(#6)}}
 }
 \DeclareDocumentCommand{\mglunat}{ o m m } {
   \mjudge[#1]{#2 : \Nat \; \circledR \; #3 \in Nat}
 }
\DeclareDocumentCommand{\mglutms}{ o m m m } {
  \mjudge[#1]{#2 : #3 \; \circledR \; #4}
}
\DeclareDocumentCommand{\mglutyu}{ o m o m } {
  \IfNoValueTF {#3}
  {\mjudge[#1]{#2 \; \overline{\circledR} \; #4}}
  {\mjudge[#1]{#2 \; \overline{\circledR}_{#3} \; #4}}
}
\DeclareDocumentCommand{\mglutmu}{ o m m m o m } {
  \IfNoValueTF {#5}
  {\mjudge[#1]{#2 : #3 \; \overline{\circledR} \; #4 \in \El(#6)}}
  {\mjudge[#1]{#2 : #3 \; \overline{\circledR}_{#5} \; #4 \in \El_{#5}(#6)}}
}
\DeclareDocumentCommand{\mglutmd}{ o m m m o m } {
  \IfNoValueTF {#5}
  {\mjudge[#1]{#2 : #3 \; \underline{\circledR} \; #4 \in \El(#6)}}
  {\mjudge[#1]{#2 : #3 \; \underline{\circledR}_{#5} \; #4 \in \El_{#5}(#6)}}
}
\newcommand{\labeledit}[1]{\label{#1}}
\newcommand{\mhighlight}[1]{\colorbox{light-gray}{\ensuremath{#1}}}
\newcommand{\iscore}[1]{\ensuremath{#1\ \texttt{core}}}
\newcommand{\istype}[1]{\ensuremath{#1\ \texttt{type}}}
\newcommand{\JH}[1]{ }
  \providecommand\BibTeX{{%
    \normalfont B\kern-0.5em{\scshape i\kern-0.25em b}\kern-0.8em\TeX}}}
\begin{document}

\title{\delamlang: A Dependent Layered Modal Type Theory for Meta-programming}         


\author{Jason Z. S. Hu}
\email{zhong.s.hu@mail.mcgill.ca}
\author{Brigitte Pientka}
\email{bpientka@cs.mcgill.ca}
\affiliation{%
  \department{School of Computer Science}
  \institution{McGill University}
  \streetaddress{McConnell Engineering Bldg. , 3480 University St.}
  \city{Montr\'eal}
  \state{Qu\'ebec}
  \country{Canada}
  \postcode{H3A 0E9}
}

\begin{abstract}
  We scale layered modal type theory to dependent types, introducing \delamlang,
  dependent layered modal type theory. %
  This type theory is novel in that we have one uniform type theory in which we can
  not only compose and execute code, but also intensionally analyze the code of types
  and terms. %
  The latter in particular allows us to write tactics as meta-programs and use regular
  libraries when writing tactics. %
  \delamlang provides a sound foundation for proof assistants to support type-safe
  tactic mechanism.
\end{abstract}




\maketitle

\section{Introduction}

\citet{hu2024layered} develop a layered modal type theory which supports pattern
matching on code. %
A critical idea lying in this system is \emph{the layering principle}. %
The layering principle begins with a core language, e.g. simply typed
$\lambda$-calculus (STLC). %
Then we extend the core language with a layer of the $\square$ modality which supports
meta-programming and intensional analysis. %
Though only two layers are demonstrated by \citet{hu2024layered}, in principle, this
extension can be iterated indefinite number of times, forming an arbitrary $n$-layered
modal type theory. %

In another dimension, instead of adding more and more layers, we could also increase
the expressive power of the core language. %
One interesting candidate for a core language is Martin-L\"of type theory (MLTT). %
MLTT is the foundation for many type-theory-based proof assistants, including Coq,
Agda and Lean. %
Treating MLTT as the core language and applying the layering principle to it could
yield a dependently typed system that allows to meta-program and intensionally analyze
code of itself without forgoing the consistency of the overall system. %
This feature gives a solid foundation for proof assistants to support truly type-safe
meta-programming. %
Due to the layering principle, libraries written for bare MLTT can also be used during
meta-programming. %
For example, we can use the same data structures for natural numbers and lists for
both programs and meta-programs. %
Meanwhile in reality, e.g. in Coq, we have at least four unexchangeable notions
natural numbers: the natural numbers defined inductively in Gallina, Ltac's natural
numbers, failure levels and hint database's search levels. %
Therefore, we can foresee that the layering principle also has the additional benefit
in engineering.

In this technical report, we first extend the previous layered modal type theory with
context variables. %
They are necessary to enable recursion on code. %
We justify the decidability of conversion checking following
\citet{abel_decidability_2017}'s reducibility proof. %
We then scale the setup all the way to MLTT, introducing \delamlang,
\textbf{De}pendent \textbf{La}yered \textbf{M}odal type theory. %
We then scale the reducibility proof to \delamlang and therefore justify its
decidability of conversion checking. %
A corollary is the consistency of \delamlang, hence showing that this type theory can
be used as a foundation for proof assistants. 

\section{Supporting Context Variables}\labeledit{sec:cv}

\citet{hu2024layered} present a layered modal type theory which supports pattern
matching on code and establish a normalization proof via a presheaf model. %
However, this form of intentional analysis is not in the most desired form: we cannot
perform recursion on the structure of code. %
This limitations comes from two aspects:
\begin{itemize}
\item STLC lacks a generic notion of types. %
  When we do recursion on, for example, a $\lambda$ expression, the type of the
  $\lambda$ and the type of the body necessarily differ. %
  Therefore, we are not able to formulate a recursion principle without type variables
  as in System F, or a variable of type \texttt{Set} in dependent type theory. %
  This problem naturally goes away if we employ a stronger core type system, so it is
  not our primary concern.
  
\item However, System F or dependent types does not give us a way to capture local
  contexts using a variable. %
  Consider again the $\lambda$ case, even if we have type variables, the recursion on
  the body is in an extended local context, so for the recursion to work, we must be
  able to capture \emph{context variables}, which capture local contexts as global
  variables. 
\end{itemize}

In this section, we focus on context variables. %
We develop the syntactic theory for our $2$-layered contextual model type theory with
context variables, and show its consistency via a reducibility predicate argument. %

\subsection{Well-formedness of Contexts and Types}\labeledit{sec:cv:ctx-types}

With context variables, the type theory becomes ``slightly'' dependently typed, in
that both global and local contexts, and types can depend on context variables, so
their well-formedness requires dedicated judgments. %
The syntax of contexts and types is:
\begin{alignat*}{2}
  i & && \tag{Layer, $i \in[0, 1]$} \\
  x, y & && \tag{Local variables} \\
  u & && \tag{Global variables} \\
  g & && \tag{Context variables} \\
  S, T &:=&&\ \Nat \sep \cont T \sep S \func T \sep (g : \Ctx) \STo T
             \tag{Types, \Typ} \\
  B &:=&&\ u : (\judge T) \sep g : \Ctx
          \tag{Global bindings} \\
  \Phi, \Psi &:= &&\ \cdot \sep \Phi, B
  \tag{Global contexts} \\
  \Gamma, \Delta &:= &&\ \cdot \sep g \sep \Gamma, x : T
                        \tag{Local contexts}
\end{alignat*}
Their well-formedness judgments are:
\begin{mathpar}
  \inferrule
  { }
  {\vdash \cdot}

  \inferrule
  {\vdash \Psi}
  {\vdash \Psi, g : \Ctx}
  
  \inferrule
  {\ljudge 0 \Gamma \\ \ljudge 0 T}
  {\vdash \Psi, u : (\judge T)}

  \inferrule
  {\vdash \Psi}
  {\ljudge i \cdot}

  \inferrule
  {\vdash \Psi \\ g : \Ctx \in \Psi}
  {\ljudge i{g}}

  \inferrule
  {\ljudge i \Gamma \\ \ljudge i T}
  {\ljudge[\Psi] i{\Gamma, x : T}}

  \inferrule
  {\vdash \Psi}
  {\ljudge i \Nat}

  \inferrule
  {\ljudge i S \\ \ljudge i T}
  {\ljudge i {S \func T}}

  \inferrule
  {\ljudge 0 \Delta \\ \ljudge 0 T}
  {\ljudge 1 {\cont[\Delta] T}}

  \inferrule
  {\ljudge[\Psi, g : \Ctx] 1 T}
  {\ljudge 1 {(g : \Ctx) \STo T}}
\end{mathpar}
$\vdash \Psi$ states the well-formedness of a global context $\Psi$. %
There are two kinds of bindings in a global context: either $x : (\judge T)$ as in 
previous modal type theory, or $g : \Ctx$ which is a context variable representing a local
context. %
Eventually, $g$ will be substituted by a concrete and well-formed local context. %
Due to the introduction of context variables, the system clearly becomes
dependently typed. %
$\ljudge i \Gamma$ states the well-formedness of a local context $\Gamma$ at
layer $i$. %
Since we are dealing with a $2$-layered system now, we know $i \in [0, 1]$. %
The base case of a local context can either be an empty local context, or be
a well-scoped context variable. %
The layer $i$ is propagated to the well-formedness judgment of types $\ljudge i T$,
which states that $T$ is well-formed in $\Psi$ at layer $i$. %
Notice that the well-formedness of $T$ does not depend on any local context. %
This judgment essentially combines $\iscore T$ and $\istype T$
predicates in \citet{hu2024layered}, with context variables taken into consideration. %
Context variables is introduced by a special meta-function type $(g : \Ctx)
\STo T$, which pushes a context variable $g$ to the global context. %
This type can be seen as a ``meta-function space'' in which we define macros or
meta-programs, so intuitively, this type and its terms can only live at layer $1$. %
We prove the presupposition lemma of these three judgments:
\begin{lemma}[Presupposition]$ $
  \begin{itemize}
  \item If $\ljudge i \Gamma$,  then $\vdash \Psi$.
  \item If $\ljudge i T$, then $\vdash \Psi$.
  \end{itemize}
\end{lemma}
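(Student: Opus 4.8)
The plan is to prove both items simultaneously, by structural induction on the given derivation. A simultaneous induction is needed because the two well-formedness judgments $\ljudge i \Gamma$ and $\ljudge i T$ are mutually defined: a local context may be extended by a type, and the code type $\cont[\Delta] T$ carries a local context $\Delta$. In each rule I read off the premises and observe that they either already contain a subderivation of $\vdash \Psi$, or a subderivation of a context- or type-well-formedness judgment over the \emph{same} global context $\Psi$, to which the induction hypothesis applies verbatim.

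Going through the rules: the three base cases — $\ljudge i \cdot$, $\ljudge i g$, and $\ljudge i \Nat$ — each list $\vdash \Psi$ as an explicit premise, so there is nothing to prove. For the context extension rule concluding $\ljudge i {\Gamma, x : T}$ and the arrow rule concluding $\ljudge i {S \func T}$, both premises are judgments over $\Psi$, and the induction hypothesis on either one delivers $\vdash \Psi$. Likewise, for the code type rule concluding $\ljudge 1 {\cont[\Delta] T}$, the premises $\ljudge 0 \Delta$ and $\ljudge 0 T$ are over $\Psi$, so the induction hypothesis applies directly.

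The one case that needs an extra step is the meta-function type, concluding $\ljudge 1 {(g : \Ctx) \STo T}$. Its single premise is $\ljudge[\Psi, g : \Ctx] 1 T$, and the induction hypothesis yields $\vdash \Psi, g : \Ctx$ rather than $\vdash \Psi$ directly. We then appeal to inversion on the context-well-formedness judgment: the only rule deriving $\vdash \Psi, g : \Ctx$ has premise $\vdash \Psi$, so we extract $\vdash \Psi$. This inversion is immediate since the rules for $\vdash (-)$ are syntax-directed in the shape of the global context, with no overlap. I anticipate no genuine obstacle; the argument is routine, and the only point that might be overlooked is this one-step inversion in the meta-function-type case, where the induction hypothesis returns well-formedness of an enlarged global context.
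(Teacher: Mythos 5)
Your proof is correct, and since the paper states this lemma without an explicit proof, the mutual structural induction you give is exactly the routine argument one would expect to fill in. The only case requiring care, the meta-function type where the induction hypothesis returns $\vdash \Psi, g : \Ctx$ and one must invert the global-context rules to recover $\vdash \Psi$, is handled correctly and is the right thing to flag.
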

\begin{lemma}[Lifting] $ $
  \begin{itemize}
  \item If $\ljudge 0 \Gamma$,  then $\ljudge 1 \Gamma$.
  \item If $\ljudge 0 T$,  then $\ljudge 1 T$.
  \end{itemize}
\end{lemma}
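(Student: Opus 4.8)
The plan is to prove both items simultaneously by a straightforward mutual induction on the given derivations of $\ljudge 0 \Gamma$ and $\ljudge 0 T$. The key observation is that among the well-formedness rules for local contexts and types, the five rules concluding $\ljudge i \cdot$, $\ljudge i g$, $\ljudge[\Psi] i{\Gamma, x : T}$, $\ljudge i \Nat$, and $\ljudge i {S \func T}$ are \emph{layer-polymorphic}: each is stated uniformly for any $i \in [0,1]$, and in each of them every premise that is itself a local-context or type well-formedness judgment carries the \emph{same} layer $i$ as the conclusion. The only two rules specific to layer $1$ — the one forming $\cont[\Delta] T$ and the one forming $(g : \Ctx) \STo T$ — have conclusions at layer $1$, hence never conclude a layer-$0$ judgment and so cannot be the last rule of a derivation of $\ljudge 0 \Gamma$ or $\ljudge 0 T$.

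Concretely, I would case-split on the last rule used. For the base cases ($\cdot$, a context variable $g$, and $\Nat$) the relevant premise is just $\vdash \Psi$, which is carried over verbatim, and re-applying the same rule with $i := 1$ yields the desired layer-$1$ judgment — no appeal to the Presupposition lemma is needed. For $\ljudge[\Psi] 0 {\Gamma, x : T}$, the premises $\ljudge 0 \Gamma$ and $\ljudge 0 T$ become $\ljudge 1 \Gamma$ and $\ljudge 1 T$ by the induction hypotheses, and re-applying the context-extension rule at layer $1$ gives $\ljudge 1 {\Gamma, x : T}$. Symmetrically, for $S \func T$ the induction hypotheses give $\ljudge 1 S$ and $\ljudge 1 T$, and the arrow rule at layer $1$ concludes $\ljudge 1 {S \func T}$.

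Since nothing here is delicate, the ``main obstacle'' is merely bookkeeping: one must set the argument up as a \emph{mutual} induction over the two judgment forms (local-context well-formedness refers to type well-formedness via the $\Gamma, x : T$ rule), and one must check — as noted above — that the layer annotations on the premises of every layer-$0$ rule are again $0$, so that the induction hypothesis always applies and no premise needs to be lifted ``across'' a layer. Both checks are immediate from inspection of the rule set. (In fact, at layer $0$ the type judgment does not depend on the context judgment at all — only $\Nat$ and $S \func T$ can be formed there — so one could even lift types first and contexts afterwards; I would nevertheless present it as a single mutual induction for uniformity.)
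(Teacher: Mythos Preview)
Your proposal is correct and is exactly the natural argument; the paper states this lemma without an explicit proof, leaving it as immediate, and your mutual induction over the layer-polymorphic rules is precisely what is intended.
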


\subsection{Weakenings}

Similar to previous layered modal type theories, we need two notions of weakenings: a global one and
a local one. %
In this section, due to context variables, we change the definition of weakenings
based on their counterparts in \citet{hu2024layered}:
\begin{align*}
  \gamma &:= \id \sep q(\gamma) \sep p(\gamma)
  \tag{Global weakenings}
  \\
  \tau &:= \id \sep q(\tau) \sep p(\tau)
  \tag{Local weakenings}
\end{align*}
We will use a global weakening $p(\id)$ in the typing rule of meta-functions in
$\Gamma[p(\id)]$ to account for the insertion of $g: \Ctx$ to the global context
$\Psi$. %
In the following, we examine the properties of weakenings. %
First, we define the composition of weakenings. %
We only define the one for global weakenings and the one for local weakenings is
completely identical:
\begin{align*}
  \id \circ \gamma' &:= \gamma' \\
  \gamma \circ \id &:= \gamma \\
  p(\gamma) \circ q(\gamma') &:= p(\gamma \circ \gamma') \\
  q(\gamma) \circ q(\gamma') &:= q(\gamma \circ \gamma') \\
  \gamma \circ p(\gamma') &:= p(\gamma \circ \gamma')
\end{align*}
\begin{lemma}[Associativity] $ $
  \begin{itemize}
  \item $(\gamma \circ \gamma') \circ \gamma'' = \gamma \circ (\gamma' \circ \gamma'')$
  \item $(\tau \circ \tau') \circ \tau'' = \tau \circ (\tau' \circ \tau'')$
  \end{itemize}
\end{lemma}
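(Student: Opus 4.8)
The plan is to prove the first identity, $(\gamma\circ\gamma')\circ\gamma'' = \gamma\circ(\gamma'\circ\gamma'')$, by structural induction on the rightmost weakening $\gamma''$; the second identity, for local weakenings $\tau,\tau',\tau''$, follows from the verbatim same argument, since local weakenings have exactly the same grammar and composition, so I only treat the global case. One preliminary remark: the five defining equations of $\circ$ have overlapping left-hand sides — for instance $\id\circ\id$ matches both of the first two clauses, and $\id\circ p(\gamma')$ matches the first clause as well as the clause $\gamma\circ p(\gamma') = p(\gamma\circ\gamma')$ — but the overlaps are confluent (each pair of applicable clauses yields the same weakening), so $\circ$ is a well-defined total function on raw weakening syntax and the proof may unfold $\circ$ freely.

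The driving observation is that the clause $\gamma\circ p(\gamma') = p(\gamma\circ\gamma')$ is the only one that fires when the second argument has the shape $p(-)$, and it fires irrespective of the head of the first argument. Hence the case $\gamma'' = \id$ is immediate: both sides reduce to $\gamma\circ\gamma'$. And the case $\gamma'' = p(\delta'')$ is uniform: the left-hand side rewrites to $p((\gamma\circ\gamma')\circ\delta'')$ and the right-hand side to $p(\gamma\circ(\gamma'\circ\delta''))$, and these agree by the induction hypothesis applied at the structurally smaller $\delta''$ (with $\gamma$ and $\gamma'$ in the other two argument slots).

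The substance of the proof is the case $\gamma'' = q(\delta'')$, where the clause that fires on $w\circ q(\delta'')$ depends on the head of $w$: $\id\circ q(\delta'') = q(\delta'')$, $p(\zeta)\circ q(\delta'') = p(\zeta\circ\delta'')$, and $q(\zeta)\circ q(\delta'') = q(\zeta\circ\delta'')$. To evaluate $(\gamma\circ\gamma')\circ q(\delta'')$ I therefore have to expose the head of $\gamma\circ\gamma'$, which I do by nested case analysis — not nested induction — first on $\gamma'$ and, in the single remaining sub-case, on $\gamma$: (i) if $\gamma'=\id$, both sides collapse to $\gamma\circ q(\delta'')$; (ii) if $\gamma'=p(\epsilon')$, then $\gamma\circ\gamma' = p(\gamma\circ\epsilon')$ and both sides rewrite to $p((\gamma\circ\epsilon')\circ\delta'')$ and $p(\gamma\circ(\epsilon'\circ\delta''))$ respectively, closed by the induction hypothesis at $\delta''$; (iii) if $\gamma'=q(\epsilon')$, split on $\gamma$: for $\gamma=\id$ both sides give $q(\epsilon'\circ\delta'')$, and for $\gamma=p(\zeta)$ (resp.\ $\gamma=q(\zeta)$) both sides give $p$ (resp.\ $q$) of $(\zeta\circ\epsilon')\circ\delta''$ and of $\zeta\circ(\epsilon'\circ\delta'')$, again closed by the induction hypothesis at $\delta''$. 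In every recursive appeal the third argument strictly shrinks from $q(\delta'')$ to $\delta''$, so ordinary structural induction on $\gamma''$ suffices; no lexicographic measure is needed.

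I expect the main obstacle to be purely organizational: unfolding $\circ$ on both sides of each of the roughly eight leaf cases in the right order and checking that the two resulting weakenings are literally identical after invoking the induction hypothesis. A point that warrants care is the order of the case splits — splitting on $\gamma$ before the head of $\gamma\circ\gamma'$ is determined would force case distinctions that do not match the clauses of $\circ$, whereas the order ``$\gamma''$ first, then $\gamma'$, then $\gamma$ only when needed'' keeps every unfolding deterministic. Beyond that the argument is entirely mechanical, and the associativity of local-weakening composition is obtained by replaying it with $\tau$'s in place of $\gamma$'s.
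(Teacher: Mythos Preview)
Your proof is correct and complete; the induction on the rightmost argument with the nested case analysis on $\gamma'$ and then $\gamma$ in the $q$ case handles every clause of the composition cleanly, and your observation about the confluent overlaps in the definition of $\circ$ is accurate. The paper itself states this lemma without proof, so there is nothing to compare against beyond noting that your argument is the natural one.
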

Then we apply global weakenings to types and contexts:
\begin{align*}
  \Nat[\gamma] &:= \Nat \\
  S \func T[\gamma] &:= (S[\gamma]) \func (T[\gamma]) \\
  \cont T [\gamma] &:= \cont[\Gamma[\gamma]]{T[\gamma]} \\
  (g: \Ctx) \STo T[\gamma] &:= (g: \Ctx) \STo (T[q(\gamma)]) \\[0.25em]
  \cdot[\gamma] &:= \cdot \\
  g[\gamma] &:= g
                 \tag{properly weakened depending on the name representation} \\
  \Gamma, x : T[\gamma] &:= (\Gamma[\gamma]), x : (T[\gamma])
\end{align*}
This definition of global weakenings admits the following lemma:
\begin{lemma}[Algebra of global weakenings] $ $
  \begin{itemize}
  \item $T[\id] = T$
  \item $\Gamma[\id] = \Gamma$
  \item $T[\gamma][\gamma'] = T[\gamma \circ \gamma']$
  \item $\Gamma[\gamma][\gamma'] = \Gamma[\gamma \circ \gamma']$
  \end{itemize}
\end{lemma}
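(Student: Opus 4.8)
The plan is to prove all four equations simultaneously by a single mutual induction on the structure of the type $T$ and the local context $\Gamma$, unfolding the definition of the weakening action on types and contexts and the definition of composition $\gamma \circ \gamma'$, as in the corresponding weakening algebra of \citet{hu2024layered}. The four bullets are not independent: the binder case for the meta-function type $(g : \Ctx) \STo T$ reduces the action on $T$ to the action on a subterm guarded by $q(\cdot)$, which forces the identity laws to be established (in a slightly strengthened form) before the functoriality laws.

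\textbf{Identity laws.} The base cases ($\Nat$, $g$, $\cdot$) are immediate, and $S \func T$ and $\Gamma, x : T$ follow directly from the induction hypotheses on the immediate subterms. The contextual type is also direct, $\cont[{\Gamma[\id]}]{T[\id]} = \cont[\Gamma]{T}$, since the inner local context is weakened by the same $\id$. The one delicate case is $(g : \Ctx) \STo T$: here $((g : \Ctx) \STo T)[\id] = (g : \Ctx) \STo T[q(\id)]$, and $q(\id)$ is \emph{not} syntactically $\id$, so the induction hypothesis $T[\id] = T$ does not apply verbatim. I would resolve this by proving the more general statement ``$T[q^n(\id)] = T$ and $\Gamma[q^n(\id)] = \Gamma$ for every $n \in \N$'' (equivalently: every weakening built from $\id$ using only $q$ acts as the identity on types and contexts); the stated laws are the case $n = 0$, and under the meta-function binder the index simply increments, which is exactly the hypothesis at $n+1$. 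This observation is really the only non-routine point of the lemma, and I expect it to be the main obstacle; an alternative that makes the wrinkle vanish entirely is to treat $q(\cdot)$ and $p(\cdot)$ as smart constructors so that $q(\id)$ reduces to $\id$.

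\textbf{Functoriality laws.} With the (strengthened) identity laws in hand, I proceed by the same mutual induction. Base and non-binder cases are immediate from the induction hypotheses; for the contextual type, $\cont[{\Gamma[\gamma][\gamma']}]{T[\gamma][\gamma']}$ rewrites, via the hypotheses on $\Gamma$ and $T$, to $\cont[{\Gamma[\gamma \circ \gamma']}]{T[\gamma \circ \gamma']}$. For the meta-function type the action is uniform in the weakening, so $((g : \Ctx) \STo T)[\gamma][\gamma'] = (g : \Ctx) \STo T[q(\gamma)][q(\gamma')]$; the hypothesis on $T$ turns the body into $T[q(\gamma) \circ q(\gamma')]$, and the defining clause $q(\gamma) \circ q(\gamma') = q(\gamma \circ \gamma')$ of composition (together with the identity clauses $\id \circ \gamma' = \gamma'$ and $\gamma \circ \id = \gamma$, and the identity laws just proved, for the edge cases where $\gamma$ or $\gamma'$ is $\id$) matches it against $(g : \Ctx) \STo T[q(\gamma \circ \gamma')]$. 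Notably the associativity lemma is not needed for this argument, though it is available. Finally, the ``properly weakened depending on the name representation'' clause for $g[\gamma]$ means that in a de Bruijn presentation $g[\gamma]$ is the renaming of the global index of $g$ induced by $\gamma$; in the $g$ case the four equations then specialise to the analogous identity and functoriality facts for that underlying index action, which are discharged by a short separate induction of the same flavour.
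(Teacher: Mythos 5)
The paper states this lemma without giving a proof, so there is no explicit argument of the paper's to compare against; it is treated as routine in the style of the analogous lemma in \citet{hu2024layered}. Your proof is correct, and in particular you have identified the genuine subtlety of this particular weakening algebra: the meta-function binder sends $T[\id]$ to $T[q(\id)]$, and $q(\id)$ is not syntactically $\id$, so the naive induction for the identity law breaks exactly there. The strengthening to ``$T[q^n(\id)] = T$ for all $n$'' (induction still on the structure of $T$, quantifying $n$ universally in the hypothesis) is the right fix and is the standard way this wrinkle is resolved when $q$ and $p$ are free constructors; the smart-constructor alternative you mention would indeed make it vanish. One small over-caution: in the functoriality step for $(g : \Ctx) \STo T$ you do not actually need the identity laws or the clauses $\id \circ \gamma' = \gamma'$ and $\gamma \circ \id = \gamma$ for ``edge cases'' --- the defining clause $q(\gamma) \circ q(\gamma') = q(\gamma \circ \gamma')$ matches unconditionally because both arguments are already of the form $q(\cdot)$, regardless of whether the inner weakening is $\id$; so functoriality goes through by a plain structural induction, independent of the identity laws. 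That makes the dependency between the two halves one-directional (identity laws need the $q^n$ generalisation, functoriality needs neither), which is a slight simplification of your plan but does not affect correctness. Your handling of the $g$ case, pushing the identity and functoriality facts down to the underlying index action, is also the right decomposition.
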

The well-formedness of weakenings is given by the following rules:
\begin{mathpar}
  \inferrule
  {\vdash \Psi}
  {\id: \Psi \To_g \Psi}

  \inferrule
  {\gamma : \Psi \To_g \Phi \\ \judge[\Psi] B}
  {p(\gamma) : \Psi, B \To_g \Phi}

  \inferrule
  {\gamma : \Psi \To_g \Phi \\ \judge[\Phi] B \\ \judge[\Psi]{B[\gamma]}}
  {q(\gamma) : \Psi, B[\gamma] \To_g \Phi, B}
  \\
  
  \inferrule
  {\ljudge[\Psi] i \Gamma}
  {\id: \Psi;\Gamma \To_i \Gamma}

  \inferrule
  {\tau : \Psi; \Gamma \To_i \Delta \\ \ljudge i T}
  {p(\tau) : \Psi; \Gamma, x : T \To_i \Delta}

  \inferrule
  {\tau : \Psi; \Gamma \To_i \Delta \\ \ljudge i T}
  {q(\tau) : \Psi; \Gamma, x : T \To_i \Delta, x : T}
\end{mathpar}
where the well-formedness of global bindings $\judge[\Psi] B$ is given as follows:
\begin{mathpar}
  \inferrule
  {\vdash \Psi}
  {\judge[\Psi]{g: \Ctx}}

  \inferrule
  {\ljudge 0 \Gamma \\ \ljudge 0 T}
  {\judge[\Psi]{u: (\judge T)}}
\end{mathpar}
The identity case for local weakenings is slightly more complex because we must take
the context variables into consideration. %

Then we can prove the following global weakening lemma:
\begin{lemma}[Global weakenings]\labeledit{lem:cvar:glob-wk-ty} $ $
  \begin{itemize}
  \item If $\ljudge[\Phi] i \Gamma$ and $\gamma : \Psi \To_g \Phi$, then $\ljudge i{ \Gamma[\gamma]}$.
  \item If $\ljudge[\Phi] i T$ and $\gamma : \Psi \To_g \Phi$, then $\ljudge i{ T[\gamma]}$.
  \end{itemize}
\end{lemma}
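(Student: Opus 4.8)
The plan is to prove both statements simultaneously by mutual induction on the derivations of $\ljudge[\Phi] i \Gamma$ and $\ljudge[\Phi] i T$, with the well-formedness derivation of the weakening $\gamma : \Psi \To_g \Phi$ carried along as an auxiliary input. Since the two well-formedness judgments for types and contexts are mutually defined, a single induction covering both cases is the natural structure. The key supporting fact I expect to need is a substitution/composition lemma relating the well-formedness of global weakenings under extension: namely, whenever $\gamma : \Psi \To_g \Phi$ and $\judge[\Phi] B$, one can form $\judge[\Psi]{B[\gamma]}$ and the extended weakening $q(\gamma) : \Psi, B[\gamma] \To_g \Phi, B$. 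This is essentially built into the inference rules for weakening well-formedness, so it is available, but one must check it applies at the specific $B = (g : \Ctx)$ needed for the meta-function case.

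\textbf{First} I would handle the context cases. For $\ljudge i \cdot$, the result $\ljudge i{\cdot[\gamma]} = \ljudge i \cdot$ follows immediately from the presupposition lemma applied to $\gamma$ (giving $\vdash \Psi$) and the rule for the empty context. For the context-variable case $\ljudge i g$ with $g : \Ctx \in \Phi$, since $g[\gamma] = g$ by definition, I need $g : \Ctx \in \Psi$; this follows by an auxiliary lemma stating that global weakenings preserve (appropriately weakened) membership of bindings — $g : \Ctx \in \Phi$ and $\gamma : \Psi \To_g \Phi$ imply $g : \Ctx \in \Psi$ (the binding $g : \Ctx$ weakens to itself). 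For the extension case $\ljudge i {\Gamma, x : T}$, the IH applied to $\ljudge i \Gamma$ and to $\ljudge i T$ (with the same $\gamma$) yields $\ljudge i{\Gamma[\gamma]}$ and $\ljudge i{T[\gamma]}$, and then the extension rule reassembles $\ljudge i{(\Gamma[\gamma]), x : (T[\gamma])} = \ljudge i{(\Gamma, x : T)[\gamma]}$.

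\textbf{Next}, the type cases. The case $\ljudge i \Nat$ is immediate from $\vdash \Psi$. The cases $\ljudge i{S \func T}$ follow directly from the IH on $S$ and $T$ with the same $\gamma$, using $(S \func T)[\gamma] = (S[\gamma]) \func (T[\gamma])$. The case $\ljudge 1{\cont[\Delta] T}$ reduces to the IH on $\ljudge 0 \Delta$ and $\ljudge 0 T$; note the weakening is applied pointwise since $\cont T[\gamma] = \cont[\Gamma[\gamma]]{T[\gamma]}$. \textbf{The hard part will be} the meta-function case $\ljudge 1{(g : \Ctx) \STo T}$, whose premise is $\ljudge[\Phi, g : \Ctx] 1 T$: here I must apply the IH at the \emph{extended} weakening $q(\gamma) : \Psi, g : \Ctx \To_g \Phi, g : \Ctx$, which I obtain from $\gamma : \Psi \To_g \Phi$ via the $q$-rule together with $\judge[\Phi]{g : \Ctx}$ and $\judge[\Psi]{(g:\Ctx)[\gamma]} = \judge[\Psi]{g : \Ctx}$; this needs $\vdash \Psi$ from presupposition. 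The IH then gives $\ljudge[\Psi, g : \Ctx] 1{T[q(\gamma)]}$, and the meta-function rule yields $\ljudge[\Psi] 1{(g : \Ctx) \STo (T[q(\gamma)])} = \ljudge[\Psi] 1{((g : \Ctx) \STo T)[\gamma]}$ by the definition of weakening on meta-function types. The only real subtlety is making sure the bookkeeping on how $q(\gamma)$ is well-formed matches the side conditions of the $q$-rule; everything else is routine structural recursion.
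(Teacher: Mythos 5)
Your proof is correct and takes essentially the same route as the paper: the paper's proof is just the one-liner ``Mutual induction on $\ljudge[\Phi] i \Gamma$ and $\ljudge[\Phi] i T$,'' and your case analysis fleshes out exactly that induction, with the meta-function case handled via $q(\gamma)$ as expected.
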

\begin{proof}
  Mutual induction on $\ljudge[\Phi] i \Gamma$ and $\ljudge[\Phi] i T$.
\end{proof}

The action of global weakenings on a type at layer $0$ is no-op:
\begin{lemma}
  If $\ljudge 0 T$, then $T[\gamma] = T$.
\end{lemma}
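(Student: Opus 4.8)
The plan is to proceed by structural induction on the derivation of $\ljudge 0 T$. The observation that drives everything is that at layer $0$ only two of the four type-formation rules can sit at the root of the derivation: the one concluding $\ljudge 0 \Nat$ and the one concluding $\ljudge 0 {S \func T}$. The remaining two rules --- for the contextual type $\cont[\Delta] T$ and for the meta-function type $(g : \Ctx) \STo T$ --- both conclude a judgment \emph{at layer $1$}, so neither can be the last rule of a layer-$0$ derivation. This restriction is exactly what makes the statement true: those two constructs are precisely the ones on which global weakening acts nontrivially, since $(\cont[\Delta] T)[\gamma] = \cont[\Delta[\gamma]]{T[\gamma]}$ rewrites the embedded local context and $((g : \Ctx) \STo T)[\gamma] = (g : \Ctx) \STo (T[q(\gamma)])$ pushes the weakening under the binder.

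Concretely, I would argue case by case on the last rule. If it is the rule for $\Nat$, then $T = \Nat$ and $\Nat[\gamma] = \Nat$ holds by the defining clause of global weakening on types. If it is the rule for $S \func T'$, then its premises are $\ljudge 0 S$ and $\ljudge 0 {T'}$, so the induction hypothesis gives $S[\gamma] = S$ and $T'[\gamma] = T'$; hence $(S \func T')[\gamma] = (S[\gamma]) \func (T'[\gamma]) = S \func T'$, again by the defining clause. Since these exhaust the possible cases, the induction closes. Note that the statement holds for an \emph{arbitrary} global weakening $\gamma$: no well-formedness hypothesis on $\gamma$ is needed, because the weakening action on types is defined by structural recursion without reference to typing.

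There is essentially no obstacle here; the only step that requires a moment's care is the case analysis on the last rule --- that is, the observation that a layer-$0$ well-formedness derivation for a type never invokes the contextual-type or meta-function rules. Everything else is a one-line computation. (The analogous equation for local contexts need not hold, since a layer-$0$ local context may still mention a context variable whose representation can be shifted by $\gamma$; this is why the lemma is restricted to types.)
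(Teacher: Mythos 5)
Your proof is correct, and since the paper states this lemma without giving an argument, the natural route is exactly the one you take: structural induction on the derivation $\ljudge 0 T$, observing that the only rules concluding a layer-$0$ type well-formedness judgment are those for $\Nat$ and $S \func T$, on both of which the global-weakening action commutes trivially. Your closing remark that the analogous equation fails for local contexts (because a layer-$0$ context may end in a context variable that $\gamma$ shifts) is also accurate and is precisely why the lemma is stated for types only.
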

Global weakenings do not really affect local weakenings:
\begin{lemma}
  If $\gamma : \Psi \To_g \Phi$ and $\tau : \Phi; \Gamma \To_i \Delta$, then $\tau :
  \Psi; \Gamma[\gamma] \To_i \Delta[\gamma]$. 
\end{lemma}
\begin{proof}
  Induction on $\tau : \Phi; \Gamma \To_i \Delta$.
\end{proof}

The actions of local weakenings only affect terms so they will be looked into in the
next section, after we consider the syntax of the type theory.

\subsection{Syntax and Typing}

In this section, we define the syntax of the type theory with context variables. %
To isolate concerns, we use $\tletbox$ for elimination, instead of pattern matching. %
Nevertheless, pattern matching on code should work with proper adjustments to our
development:
\begin{alignat*}{2}
  m & && \tag{Natural numbers, $\N$} \\
  \delta &:=&& \cdot_{g?}^{m} \sep \wk_g^m \sep \delta, t/x \tag{Local substitutions} \\
  s, t &:=&&\ x \sep u^\delta \tag{Terms, $\Exp$} \\
  & && \sep \ze \sep \su t
  \tag{natural numbers}\\
  & && \sep \boxit t \sep \letbox u s t
  \tag{box}\\
  & &&\sep \lambda x. t \sep s\ t \tag{functions}  \\
  & &&\sep \Lambda g. t \sep t ~\$~ \Gamma \tag{meta-functions} 
\end{alignat*}
Similar to before, we have natural numbers as our base type. %
To construct and eliminate meta-functions, we have $\Lambda g. t$ and $t ~\$~ \Gamma$
respectively. %
Since we have contextual types, each global variable must be associated with a local
substitution. %
In a local substitution, due to how a local context is structured, there are also two
base cases: it can either be an empty local substitution $\cdot_{g?}^{m}$, or a weakening $\wk^m_g$ of a
context variable $g$. %
The number $m$ associated with both cases are the number of local weakening $p$'s. %
Effectively, $m$ equals to the length of the codomain local context in the typing
judgment, as we will specify later in this section. %
This number can be fetched from a local substitution by the following function:
\begin{align*}
  \widehat{\cdot_{g?}^m} &:= m \\
  \widehat{\wk_{g}^m} &:= m \\
  \widehat{\delta, t/x} &:= \widehat{\delta}
\end{align*}
In addition, $\cdot$ is optionally associated with a context variable $g$. %
If there is such a $g$, it is the base case of the codomain local context. %
These information are necessary in order to define the local and global substitution
operations. %
We rely on the following function to return the context variable inside of a local
substitution, if it exists:
\begin{align*}
  \widecheck{\cdot_{g?}^m} &:= g? \\
  \widecheck{\wk_g^m} &:= g \\
  \widecheck{\delta, t/x} &:= \widecheck{\delta}
\end{align*}
Next, we define the action of global weakenings on terms and local substitutions:
\begin{align*}
  x[\gamma] &:= x \\
  u^\delta[\gamma] &:= u^{\delta[\gamma]}
                     \tag{with $u$ properly weakened depending on name representation} \\
  \ze[\gamma] &:= \ze \\
  \su t[\gamma] &:= \su {(t[\gamma])} \\
  \boxit t[\gamma] &:= \boxit {(t[\gamma])} \\
  \letbox u s t[\gamma] &:= \letbox u {s[\gamma]} {(t[q(\gamma)])} \\
  \lambda x. t[\gamma] &:= \lambda x. {(t[\gamma])} \\
  t~s[\gamma] &:= (t[\gamma])~(s[\gamma]) \\
  \Lambda g. t[\gamma] &:= \Lambda g. {(t[q(\gamma)])} \\
  t~\$~\Gamma[\gamma] &:= (t[\gamma])~\$~(\Gamma[\gamma]) \\[0.25em]
  \wk_g^m[\gamma] &:= \wk_g^m
                     \tag{with $g$ properly weakened} \\
  \cdot_{g?}^m[\gamma] &:= \cdot_{g?}^m
                         \tag{with $g$ properly weakened if exists} \\
  \delta, t/x[\gamma] &:= (\delta[\gamma]), (t[\gamma])/x
\end{align*}
The application of global weakenings satisfy the following lemma:
\begin{lemma}[Algebra of global weakenings]$ $
  \begin{itemize}
  \item $t[\id] = t$
  \item $\delta[\id] = \delta$
  \item $t[\gamma][\gamma'] = t[\gamma \circ \gamma']$
  \item $\delta[\gamma][\gamma'] = \delta[\gamma \circ \gamma']$
  \end{itemize}
\end{lemma}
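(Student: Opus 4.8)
The plan is to prove all four equations simultaneously by mutual structural induction on the syntax of terms $t \in \Exp$ and of local substitutions $\delta$, leaning on the already-established algebra of global weakenings for types and local contexts and on the associativity lemma for $\circ$. The congruence cases ($\ze$, $\su t$, $\boxit t$, $s~t$, $\lambda x.t$, and $\delta, t/x$) are immediate from the corresponding induction hypotheses, and the case $t~\$~\Gamma$, whose weakening unfolds to $(t[\gamma])~\$~(\Gamma[\gamma])$, reduces directly to the term IH on $t$ together with $\Gamma[\id]=\Gamma$ and $\Gamma[\gamma][\gamma']=\Gamma[\gamma\circ\gamma']$ from that earlier lemma. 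The leaf cases $x$, $\cdot_{g?}^m$, and $\wk_g^m$, and the global-variable case $u^\delta$, need only the corresponding facts about how weakenings act on the carried name / context variable (the identity acts trivially, composites act by composites), which are part of the basic weakening infrastructure.

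For the identity laws $t[\id]=t$ and $\delta[\id]=\delta$ the only real difficulty is in the binder cases $\letbox u s t$ and $\Lambda g. t$: their weakening pushes $q(\id)$ — not $\id$ — into the body, and $q(\id)$ is not syntactically $\id$, so the IH does not apply verbatim. I would resolve this by strengthening the statement being proved to: $t[\gamma]=t$ and $\delta[\gamma]=\delta$ for every \emph{identity-shaped} global weakening, i.e. every $\gamma$ of the form $q^n(\id)$ with $n \ge 0$ (taking $q^0(\id)=\id$). Such weakenings have equal domain and codomain and never shift anything, so the leaf and global-variable cases still go through, and the class is closed under $q(-)$, so in the binder cases the IH applies to the body at $q^{n+1}(\id)$. (If one instead takes global weakenings up to the evident equation $q(\id)=\id$, no strengthening is needed.)

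For the composition laws $t[\gamma][\gamma']=t[\gamma\circ\gamma']$ and $\delta[\gamma][\gamma']=\delta[\gamma\circ\gamma']$, the binder cases are where the structure of $\circ$ is used: e.g. $(\letbox u s t)[\gamma][\gamma'] = \letbox u {s[\gamma][\gamma']}{t[q(\gamma)][q(\gamma')]}$, which by the IHs equals $\letbox u {s[\gamma\circ\gamma']}{t[q(\gamma)\circ q(\gamma')]}$, and then $q(\gamma)\circ q(\gamma') = q(\gamma\circ\gamma')$ (a defining clause of composition) lets us recognize this as $(\letbox u s t)[\gamma\circ\gamma']$; the $\Lambda g. t$ case is the same. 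The case $u^\delta$ uses the IH on $\delta$ together with functoriality of the renaming action on the variable $u$ itself, $u[\gamma][\gamma']=u[\gamma\circ\gamma']$ — a short separate induction on the pair of weakenings — and the base local substitutions $\cdot_{g?}^m$, $\wk_g^m$ likewise need only $g[\gamma][\gamma']=g[\gamma\circ\gamma']$ for the context variable $g$. I expect the $q(\id)$ mismatch in the identity law to be the one point that actually requires thought (hence the strengthening above); everything else is routine bookkeeping parallel to, and relying on, the already-proved version of this lemma for types and contexts.
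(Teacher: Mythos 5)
The paper does not give a proof of this lemma---it is deferred to the Agda mechanization---so there is nothing in the text to compare against directly, but your argument is sound, and you have correctly identified the one non-routine point. The clauses for $\tletbox$ and $\Lambda$ push $q(\gamma)$ into the body, so $t[\id]=t$ cannot be proved by a naive structural induction: the body of a binder sees $q(\id)$, which is not syntactically $\id$. Strengthening the identity claim to ``$t[q^n(\id)]=t$ and $\delta[q^n(\id)]=\delta$ for all $n\ge 0$'' is exactly the right fix, since $\{q^n(\id) : n\ge 0\}$ is closed under $q(-)$ and each of its members fixes every global and context variable, so the strengthened statement is closed under the induction.

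One point to tighten. In the $t~\$~\Gamma$ case you cite ``$\Gamma[\id]=\Gamma$ from that earlier lemma,'' but in the strengthened induction what you actually need is $\Gamma[q^n(\id)]=\Gamma$. That earlier algebra-of-global-weakenings lemma for types and local contexts faces the same difficulty through its own $q$-pushing clause $(g:\Ctx)\STo T[\gamma] = (g:\Ctx)\STo (T[q(\gamma)])$, so its proof necessarily carries the same $q^n(\id)$-strengthening (for types, local contexts, and hence global bindings $B$); you should either cite that strengthened form explicitly, or state it as a prerequisite lemma rather than relying on the weaker form as literally stated. With that supplied, the rest is as you observe: congruence steps plus the shift-level facts $u[q^n(\id)]=u$, $g[q^n(\id)]=g$, $u[\gamma][\gamma']=u[\gamma\circ\gamma']$, the defining clause $q(\gamma)\circ q(\gamma')=q(\gamma\circ\gamma')$, and the already-established associativity of $\circ$.
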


Applying local weakenings on terms and local substitutions is defined as follows:
\begin{align*}
  x[\tau] &:= x
  \tag{properly weakened} \\
  u^\delta[\tau] &:= u^{\delta[\tau]} \\
  \ze[\tau] &:= \ze \\
  \su t[\tau] &:= \su {(t[\tau])} \\
  \boxit t[\tau] &:= \boxit t \\
  \letbox u s t[\tau] &:= \letbox u {s[\tau]} {(t[\tau])} \\
  \lambda x. t[\tau] &:= \lambda x. {(t[q(\tau)])} \\
  t~s[\tau] &:= (t[\tau])~(s[\tau]) \\
  \Lambda g. t[\tau] &:= \Lambda g. {(t[\tau])} \\
  t~\$~\Gamma[\tau] &:= (t[\tau])~\$~\Gamma \\[0.25em]
  \wk_g^m[\tau] &:= \wk_g^{m + m'}
  \tag{where $m'$ is the number of $p$ constructor in $\tau$} \\
  \cdot_{g?}^m[\tau] &:= \cdot_{g?}^{m + m'}
                       \tag{where $m'$ is the number of $p$ constructor in $\tau$} \\
  \delta, t/x[\tau] &:= (\delta[\tau]), (t[\tau])/x
\end{align*}
The application of local weakenings satisfy the following lemma:
\begin{lemma}[Algebra of local weakenings]$ $
  \begin{itemize}
  \item $t[\gamma][\tau] = t[\tau][\gamma]$
  \item $\delta[\gamma][\tau] = \delta[\tau][\gamma]$
  \item $t[\tau][\tau'] = t[\tau \circ \tau']$
  \item $\delta[\tau][\tau'] = \delta[\tau \circ \tau']$
  \item $t[\delta][\tau] = t[\delta[\tau]]$
  \item $(\delta \circ \delta')[\tau] = \delta \circ (\delta'[\tau])$
  \end{itemize}
\end{lemma}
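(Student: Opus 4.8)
The plan is to prove all six identities by structural induction on the mutually recursive grammars of terms ($\Exp$) and local substitutions, with every statement quantified over all weakenings. The six equations fall into three coupled pairs — items one and two, three and four, five and six — each pair being mutual because $u^\delta$ embeds a substitution into a term and $\delta, t/x$ embeds a term into a substitution; moreover the last pair may cite the first two. I would therefore establish $\{1,2\}$ first, then $\{3,4\}$, then $\{5,6\}$.

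Before starting I would record one arithmetic auxiliary: the number of $p$-constructors in $\tau \circ \tau'$ equals the sum of those in $\tau$ and in $\tau'$ (an immediate induction on the clauses defining $\circ$ for local weakenings). This is precisely what is needed for the two base cases $\wk_g^m$ and $\cdot_{g?}^m$ of local substitutions, whose local-weakening action only shifts the exponent $m$ by that count.

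For items one and two ($t[\gamma][\tau] = t[\tau][\gamma]$ and its substitution analogue) I would run through the formers using the weakening-action tables above: variables, $\boxit{t}$ (local weakening leaves boxed terms untouched), and $t~\$~\Gamma$ (local weakening leaves the argument context untouched) are immediate; $\lambda x. t$, $\Lambda g. t$ and $\letbox u s t$ follow from the induction hypothesis applied under $q(\tau)$ or $q(\gamma)$, which is why the statements must range over all weakenings; and $u^\delta$ reduces to item two, whose $\delta, t/x$ case reduces back to item one. Items three and four ($t[\tau][\tau'] = t[\tau \circ \tau']$ and its analogue) use the same case split, now closing the binder cases by the defining equations of composition of local weakenings (in particular $q(\tau) \circ q(\tau') = q(\tau \circ \tau')$ and the unit laws) and the base cases by the $p$-count auxiliary.

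For items five and six I would unfold local substitution application $t[\delta]$ and local substitution composition $\delta \circ \delta'$ (which this lemma presupposes). The crucial steps are $u^{\delta'}$, where $u^{\delta'}[\delta] = u^{\delta' \circ \delta}$ turns item five into item six, and dually the $\delta, t/x$ case of item six, which closes via item five on $t$ and item six on $\delta$; the $\lambda x. t$ and $\Lambda g. t$ cases need items two, three and four to push $\tau$ past the weakened-and-extended substitution that application introduces under a binder, while $\boxit{t}$ and the meta-application are again immediate. I expect the only difficulty to be \emph{bookkeeping}: keeping the generic statements strong enough that the hypotheses apply under binders, and tracking how the numeric exponent $m$ of the substitution base cases interacts with composition of weakenings; there are no conceptual obstacles, with the one caveat that items five and six can be discharged only once the definitions of $t[\delta]$ and $\delta \circ \delta'$ are in place, so in practice they are proven in tandem with those definitions.
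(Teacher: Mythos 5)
The paper states this lemma without a proof, but your plan — mutual structural induction on terms and local substitutions, organized into the three coupled pairs $\{1,2\}$, $\{3,4\}$, $\{5,6\}$ with the $p$-count auxiliary for the weakening-composition base cases — is precisely the intended argument and is correct. The only minor imprecision is the attribution of which binder cases of item five need which earlier items (the $\lambda$ case uses items three and four via $p(\id)\circ q(\tau)=\tau\circ p(\id)=p(\tau)$, while $\Lambda g.\,t$ and $\letbox u s t$ use item two to commute the global $p(\id)$ past $\tau$), but this does not affect the soundness of the proof.
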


Then the weakenings of dual-contexts are defined as just tuples of global and local
weakenings:
\begin{align*}
  &\inferrule
  {\gamma : \Psi \To_g \Phi \\ \tau : \Psi;\Gamma \To_i \Delta[\gamma]}
  {\gamma; \tau : \Psi; \Gamma \To_i \Phi; \Delta} \\
  &t[\gamma; \tau] := t[\gamma][\tau] \\
  &\delta[\gamma; \tau] := \delta[\gamma][\tau]
\end{align*}
To disambiguate, when we apply a local weakening literal, we usually pair it with an
identity global weakening. %
For example, $t [ \id; p(\id)]$ \emph{locally} weakens $t$ with a local weakening $p(\id)$. %
When we write $t[p(\id)]$, we mean that $t$ is \emph{globally} weakened by $p(\id)$. %
The correctness lemma for these operations can only be proved after defining the typing
rules. %
In order to define the typing rules, we must first give the definition of global
substitutions of types and local contexts:
\begin{alignat*}{2}
  \sigma &:=&& \cdot \sep \sigma, t/u \sep \sigma, \Gamma/g
               \tag{Global substitutions}
\end{alignat*}
Global substitutions can obviously be globally weakened by iteratively applying a
global weakening to the terms and local contexts within. %
The global substitution operation of types and local contexts is defined as follows:
\begin{align*}
  \Nat [\sigma] &:= \Nat \\
  S \func T [\sigma] &:= (S[\sigma]) \func (T[\sigma]) \\
  \cont T [\sigma] &:= \cont[\Gamma[\sigma]]{T[\sigma]} \\
  (g: \Ctx) \STo T [\sigma] &:= (g: \Ctx) \STo (T [\sigma[p(\id)], g/g])
  \\[0.25em]
  \cdot [\sigma] &:= \cdot \\
  g [\sigma] &:= \sigma(g)
                  \tag{lookup $g$ in $\sigma$; undefined if $g$ is not bound or result is not a
                  local context} \\
  \Gamma, x : T[\sigma] &:= (\Gamma[\sigma]) , x : (T[\sigma])
\end{align*}
For consistency of notations, we often write $q(\sigma)$ for $\sigma[p(\id)], g/g$ or
$\sigma[p(\id)], u^\id/u$. %
This notation relates similar operations of weakenings and substitutions. %
The global substitutions on types satisfy the following lemma:
\begin{lemma}[Algebra of Global Substitutions]$ $
  \begin{itemize}
  \item $T[\gamma][\gamma'] = T[\gamma \circ \gamma]$
  \item $\sigma[\gamma][\gamma'] = \sigma[\gamma \circ \gamma]$
  \item $T[\sigma][\gamma] = T[\sigma[\gamma]]$
  \item $\Gamma[\sigma][\gamma] = \Gamma[\sigma[\gamma]]$
  \item $t[\sigma][\gamma] = t[\sigma[\gamma]]$
  \item $\delta[\sigma][\gamma] = \delta[\sigma[\gamma]]$
  \end{itemize}
\end{lemma}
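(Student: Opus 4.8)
The plan is to prove the six equations in two stages, since the first two are essentially already in hand. The equation $T[\gamma][\gamma'] = T[\gamma\circ\gamma']$ is literally the statement of the earlier \emph{Algebra of global weakenings} lemma for types. The equation $\sigma[\gamma][\gamma'] = \sigma[\gamma\circ\gamma']$ follows by a short induction on the list structure of $\sigma$: each component is either $t/u$ or $\Gamma/g$, and we apply, respectively, the term-level and type/context-level algebra-of-weakenings lemmas componentwise (the numeric annotations on $\cdot_{g?}^m$ and $\wk_g^m$ are untouched by global weakening, so no bookkeeping arises there). So the real content is the four commutation equations, items three through six.

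For those I would run a single simultaneous structural induction over the four syntactic categories $T$, $\Gamma$, $t$, $\delta$ — types and local contexts are mutually recursive, and terms and local substitutions are mutually recursive and in addition mention local contexts via $t~\$~\Gamma$. Before starting I would isolate the one auxiliary fact that every binder case needs, namely that a global weakening can be pushed past the $q(\cdot)$ appearing in the substitution clauses: $q(\sigma)[q(\gamma)] = q(\sigma[\gamma])$. Unfolding $q(\sigma) = \sigma[p(\id)], g/g$, this reduces to $\sigma[p(\id)][q(\gamma)] = \sigma[\gamma][p(\id)]$ together with $g[q(\gamma)] = g$. The first equality follows from item two (now available) plus the composition identities $p(\id)\circ q(\gamma) = p(\gamma)$ and $\gamma\circ p(\id) = p(\gamma)$, both read straight off the definition of $\circ$; the second holds because $q(\gamma)$ fixes the freshly bound context variable.

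With that in place the induction is largely mechanical: $\Nat$, $S\func T$, application, $\ze$, $\su t$, $\boxit t$, the $\Gamma, x:T$ case, the $\delta, t/x$ case, and the annotated bases $\cdot_{g?}^m$ and $\wk_g^m$ are all congruences closed under the induction hypotheses (for $\cont T$ one may recurse, or note a layer-$0$ subtype is unaffected). The substantive cases are exactly the three binders $(g:\Ctx)\STo T$, $\Lambda g.\,t$, and $\letbox u s t$ (the last extends the global context in its third argument): each is discharged by unfolding both sides, applying the induction hypothesis under the binder, and rewriting with the $q$-commutation fact. The remaining non-congruence case is the context-variable lookup $g[\sigma] = \sigma(g)$, for which I would show $(\sigma(g))[\gamma] = (\sigma[\gamma])(g)$ by induction on $\sigma$, using that $\sigma[\gamma]$ is defined componentwise so lookup and weakening commute; the $u^\delta$ case is handled by the analogous componentwise reasoning for the term-level substitution.

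The main obstacle I anticipate is bookkeeping rather than conceptual: keeping the $p(\id)$ shifts inside $q(\sigma)$ synchronized with the $q(\gamma)$ shifts on the weakening side, uniformly across the three binders, where an off-by-one in how the $m$-annotations or the de Bruijn ``properly weakened'' caveats are threaded would bite. A secondary care point is partiality — $g[\sigma]$ is undefined when $g$ is unbound in $\sigma$ or bound to a non-context — so items three through six should be read as ``if one side is defined then so is the other and they agree,'' which the componentwise description of $\sigma[\gamma]$ makes easy to carry through the induction.
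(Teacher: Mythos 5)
Your overall structure is sound, and the $q$-commutation calculation you single out is correct: from $p(\id)\circ q(\gamma) = p(\gamma) = \gamma\circ p(\id)$ and item two of the lemma one does get $q(\sigma)[q(\gamma)] = q(\sigma[\gamma])$, which is exactly what the three global-binder cases need. But two cases you wave through as ``congruences closed under the IH'' or as ``analogous componentwise reasoning'' are genuinely not of that shape, and the proposal as written would not close.

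The $u^\delta$ case: $u^\delta[\sigma]$ unfolds to $\sigma(u)[\delta[\sigma]]$, i.e.\ a \emph{local substitution applied to the looked-up term}, not a mere lookup. After using that lookup commutes with componentwise weakening ($(\sigma[\gamma])(u) = \sigma(u)[\gamma]$) and the IH for $\delta$, the goal reduces to $\sigma(u)[\delta[\sigma]][\gamma] = (\sigma(u)[\gamma])[\delta[\sigma][\gamma]]$, which is precisely the commutation law $t[\delta][\gamma] = (t[\gamma])[\delta[\gamma]]$ between local substitution application and global weakening. This is a separate lemma (the paper's ``Algebra of Local Substitutions,'' first item); your ``analogous componentwise reasoning for the term-level substitution'' covers only the lookup half of this case, not the local-substitution half.

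The $\wk_g^m$ and $\cdot_g^m$ bases: these are not congruence cases of the definition of $[\sigma]$. $\cdot_g^m[\sigma]$ looks up $\sigma(g)$, reads its length, and case-splits on whether $\sigma(g)$ ends with $\cdot$ or a context variable, rebuilding $\cdot^{|\sigma(g)|+m}$ or $\cdot_{g'}^{|\sigma(g)|+m}$; you must explicitly verify that global weakening preserves the length of a local context, its terminal shape, and the terminal variable. $\wk_g^m[\sigma]$ unfolds to $\id_{\sigma(g)}[\id;\, p^m(\id)]$, and commuting the outer global weakening $[\gamma]$ past that requires both $\delta[\gamma][\tau] = \delta[\tau][\gamma]$ (commutation of global and local weakenings, already in the paper) and the identity fact $\id_{\Gamma}[\gamma] = \id_{\Gamma[\gamma]}$ (equivalently $\wk^m_{\Gamma}[\gamma] = \wk^m_{\Gamma[\gamma]}$). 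The latter is provable by an easy induction on $\Gamma$ over the definition of $\wk^m_{\Gamma}$, but it is an auxiliary lemma your sketch omits. With those two gaps filled, the simultaneous induction you describe goes through; the congruence treatment of the remaining constructors and your handling of the $g$ lookup and the three binders are fine.
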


Next, we give the application operation of local substitutions on terms and
composition of local substitutions:
\begin{align*}
  x[\delta] &:= \delta(x)
              \tag{lookup of $x$ in $\delta$} \\
  u^{\delta'}[\delta] &:= u^{\delta' \circ \delta} \\
  \ze[\delta] &:= \ze \\
  \su t[\delta] &:= \su{(t[\delta])} \\
  \lambda x. t[\delta] &:= \lambda x. (t[\delta[\id; p(\id)], x/x]) \\
  t~s[\delta] &:= (t[\delta])~(s[\delta]) \\
  \boxit t [\delta] &:= \boxit t \\
  \letbox u s t[\delta] &:= \letbox u {s[\delta]} {(t[\delta[p(\id)]])} \\
  \Lambda g. t [\delta] &:= \Lambda g. (t[\delta[(p(\id))]]) \\
  t~\$~\Gamma [\delta] &:= (t[\delta]) ~\$~\Gamma \\[.25em]
  \wk_g^m \circ \delta &:= \wk_g^{\widehat\delta} \\
  \cdot^m \circ \delta &:= \cdot_{\widecheck{\delta}}^{\widehat\delta} \\
  \cdot_{g}^m \circ \delta &:= \cdot_{g}^{\widehat\delta} \\
  (\delta', t/x) \circ \delta &:= (\delta' \circ \delta), t[\delta]/x
\end{align*}
Similarly, we might also write $q(\delta)$ for $\delta[\id; p(\id)], x/x$. %
Notice that in the definition of composition, we make use of the $\widehat\delta$
function to fetch the number of weakenings. %
This number is used in the application operation of global substitutions given below
in the application of global substitutions to specify the number of local $p$
weakenings when a context variable is substituted by a concrete context. %
In the composition of $\cdot^m$, we use $\widecheck{\delta}$ to query whether
$\delta$'s codomain context starts from a context variable. %
If it does, then we use that context variable in the result of the composition. %
Local substitutions and global weakenings interact in the following way:
\begin{lemma}[Algebra of Local Substitutions] $ $
  \begin{itemize}
  \item $t[\delta][\gamma] = (t[\gamma])[\delta[\gamma]]$
  \item $(\delta \circ \delta')[\gamma] = (\delta[\gamma]) \circ (\delta'[\gamma])$
  \end{itemize}
\end{lemma}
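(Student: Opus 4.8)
The plan is to prove both statements simultaneously by structural induction, the first by induction on the term $t$ and the second by induction on the left operand of the composition, the two parts invoking one another only on strict subterms. They are genuinely interdependent: the term case $u^{\delta'}$ of the first statement reduces to $u^{(\delta' \circ \delta)[\gamma]}$ on one side and $u^{(\delta'[\gamma])\circ(\delta[\gamma])}$ on the other, so it needs the second statement applied to the (smaller) substitution $\delta'$; conversely, the cons case $(\delta'', t/x)\circ\delta'$ of the second statement needs the second statement on $\delta''$ together with the first statement on the (smaller) term $t$. Throughout I freely use the already-established algebra lemmas for global weakenings, for local weakenings, and for global substitutions, as well as the composition equations for weakenings.

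Before the main induction I would record two routine auxiliary facts. First, a global weakening leaves invariant the weakening counter and the trailing context variable of a local substitution: $\widehat{\delta[\gamma]} = \widehat{\delta}$ and $\widecheck{\delta[\gamma]} = \widecheck{\delta}$, the latter up to the name-representation renaming of the context variable; both follow by a one-line induction on $\delta$, since $[\gamma]$ acts componentwise on a cons and leaves the number $m$ and the context variable untouched in the two base cases. Second, the weakening identity $p(\id)\circ q(\gamma) = p(\gamma) = \gamma\circ p(\id)$, which is immediate from the definition of composition of global weakenings. The first fact discharges the base cases $\wk_g^m\circ\delta$, $\cdot^m\circ\delta$, $\cdot_g^m\circ\delta$ of the second statement; the second fact is what makes the global-binder cases of the first statement go through.

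For the first statement, the cases $x$, $\ze$, $\su{t}$, $s\ t$, and $t~\$~\Gamma$ are direct: the variable case reduces to the observation that $[\gamma]$ distributes over a local substitution and commutes with lookup, and the rest are congruences closed under the induction hypothesis (in the $t~\$~\Gamma$ case the two copies of $\Gamma[\gamma]$ match on the nose, a local substitution not touching the context argument). The case $\boxit{t}$ is immediate, both $[\delta]$ and $[\delta[\gamma]]$ acting as no-ops on a box. The case $u^{\delta'}$ invokes the second statement as explained above. The binder case $\lambda x. t$ reduces, after the induction hypothesis on $t$, to showing $q(\delta)[\gamma] = q(\delta[\gamma])$, i.e. $(\delta[\id;p(\id)],x/x)[\gamma] = (\delta[\gamma])[\id;p(\id)],x/x$; unfolding the componentwise action of $[\gamma]$ and then commuting the local weakening $p(\id)$ past the global weakening $\gamma$ via the algebra lemmas for local and global weakenings ($\delta[\gamma][\tau] = \delta[\tau][\gamma]$, $\delta[\id]=\delta$) closes it. The cases $\letbox{u}{s}{t}$ and $\Lambda g.\, t$ are handled uniformly: after the induction hypothesis on the subterms, the remaining obligation is $(\delta[p(\id)])[q(\gamma)] = (\delta[\gamma])[p(\id)]$, and by the algebra lemma for global weakenings both sides equal $\delta[p(\id)\circ q(\gamma)] = \delta[p(\gamma)] = \delta[\gamma\circ p(\id)]$ using the identity from the previous paragraph. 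For the second statement, each base case unfolds the definitions of $\circ$ and of $[\gamma]$ and then appeals to the counter/context-variable invariance facts, while the cons case $(\delta'', t/x)\circ\delta'$ splits into $(\delta''\circ\delta')[\gamma] = (\delta''[\gamma])\circ(\delta'[\gamma])$, closed by the induction hypothesis of the second statement on $\delta''$, and $t[\delta'][\gamma] = (t[\gamma])[\delta'[\gamma]]$, closed by the induction hypothesis of the first statement on $t$.

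I expect the main obstacle to be the name-representation bookkeeping concealed behind the ``properly weakened'' side conditions and behind the overloaded $q(-)$ and $p(-)$ abbreviations: one must keep straight that $q(-)$ applied to a local substitution extends the \emph{local} context — and hence commutes with the ambient \emph{global} weakening $\gamma$ — whereas $q(-)$ and $p(-)$ on a \emph{global} weakening are governed by the composition equations. All the genuine content of the lemma is concentrated in the two small identities $q(\delta)[\gamma] = q(\delta[\gamma])$ and $p(\id)\circ q(\gamma) = p(\gamma)$; once those are secured, every remaining case is a mechanical congruence.
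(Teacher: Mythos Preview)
Your proposal is correct and follows the natural approach: simultaneous structural induction on $t$ and on the left operand of $\circ$, with the two key identities $q(\delta)[\gamma] = q(\delta[\gamma])$ and $p(\id)\circ q(\gamma) = p(\gamma) = \gamma\circ p(\id)$ isolating all the nontrivial content. The paper does not spell out a proof for this lemma---it is among those deferred to the Agda mechanization---but your argument is exactly what such a mechanization would unfold to, and your identification of the auxiliary invariances $\widehat{\delta[\gamma]} = \widehat\delta$ and $\widecheck{\delta[\gamma]} = \widecheck\delta$ for the base cases of composition is the right bookkeeping.
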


Now, we define the application of global substitutions to terms and local substitutions:
\begin{align*}
  x[\sigma] &:= x \\
  u^{\delta}[\sigma] &:= \sigma(u)[\delta[\sigma]]
                        \tag{lookup of $u$ in $\sigma$} \\
  \ze[\sigma] &:= \ze \\
  \su t[\sigma] &:= \su{(t[\sigma])} \\
  \lambda x. t[\sigma] &:= \lambda x. (t[\sigma]) \\
  t~s[\sigma] &:= (t[\sigma])~(s[\sigma]) \\
  \boxit t [\sigma] &:= \boxit {(t[\sigma])} \\
  \letbox u s t[\sigma] &:= \letbox u {s[\sigma]} {(t[\sigma[p(\id)], u^\id/u])} \\
  \Lambda g. t [\sigma] &:= \Lambda g. (t[\sigma[(p(\id))], g/g]) \\
  t~\$~\Gamma [\sigma] &:= (t[\sigma]) ~\$~(\Gamma[\sigma]) \\[.25em]
  \wk_g^m [\sigma] &:= \id_{\sigma(g)}[\id; p^m(\id)]
                     \tag{defined only when $\sigma(g)$ is a local context}\\
  \cdot^m [\sigma] &:= \cdot^m \\
  \cdot_{g}^m [\sigma] &:= \cdot^{|\Gamma| + m}
  \tag{if $\sigma(g) = \Gamma$ and $\Gamma$ ends with a $\cdot$} \\
  \cdot_{g}^m [\sigma] &:= \cdot_{g'}^{|\Gamma| + m}
  \tag{if $\sigma(g) = \Gamma$ and $\Gamma$ ends with a $g'$} \\
  (\delta, t/x) [\sigma] &:= (\delta[\sigma]), t[\sigma]/x
\end{align*}
In the definition of global substitution operation, we make use of the local identity
substitution, which is defined through local weakening substitutions as below:
\begin{align*}
  \wk^m_{\cdot} &:=  \cdot^m \\
  \wk^m_{g} &:=  \wk_g^m \\
  \wk^m_{\Gamma, x : T} &:= \wk^{1 + m}_{\Gamma}, x/x
\end{align*}
The local identity substitution is defined as a special case of the local weakening
substitutions by setting $m$ to be $0$:
\begin{align*}
  \id_{\Gamma} := \wk^0_{\Gamma}
\end{align*}
Though we make heavy use of symbol overloading, but hopefully the exact meanings of
symbols should be disambiguated by the surrounding textual contexts. %
Usually, the symbols are designed such that their behaviors remain the same for their
ambiguous readings (e.g. various uses of $\id$ and $\wk$).  %

At last, we need to define the global identity substitution before giving the typing
rules. %
The global identity substitution is defined in the same principle; it is a special
case of the global weakening substitutions:
\begin{align*}
  \wk^m_{\cdot} &:=  \cdot \\
  \wk^m_{\Psi, g : \Ctx} &:=  \wk_{\Psi}^{1+m}, g/g \\
  \wk^m_{\Psi, u : (\judge T)} &:= \wk^{1 + m}_{\Psi}, u^{\id_\Gamma[p^{1+m}(\id)]}/u
\end{align*}
Notice that in the cons case for $u : (\judge T)$, the local identity substitution
$\id_{\Gamma}$ must be weakened by $p^{1+m}(\id)$, because its global typing
environment is weakened by $\wk^{1 + m}_{\Psi}$, which takes the same effect. %
This weakening is necessary to make the typing to go through. %
Then identity is just a special case:
\begin{align*}
  \id_{\Psi} := \wk^0_{\Psi}
\end{align*}
We sometimes omit the subscript for different $\id$, as we know their effects on
types, local contexts, terms and local substitutions are just \emph{identity}. 

The composition of global substitutions is defined intuitively:
\begin{align*}
  \cdot \circ \sigma' &:= \cdot \\
  (\sigma, t/u) \circ \sigma' &:= (\sigma \circ \sigma'), t[\sigma']/u \\
  (\sigma, \Gamma/g) \circ \sigma' &:= (\sigma \circ \sigma'), \Gamma[\sigma']/g
\end{align*}
Essentially, composition just iteratively applies $\sigma'$ to all terms and contexts
in the first global substitution. %
Notice that composition can only be defined, after the applications of global
substitutions to both terms and local contexts are defined. %
This definition does lead to some complication when we tries to prove the global
substitution lemma of terms and local substitutions in the next section. %

At last, the following gives the typing rules of terms and local substitutions.
\begin{mathpar}
  \inferrule
  {\ljudge i \Gamma \\ x : T \in \Gamma}
  {\ltyping{i}{x}{T}}

  \inferrule
  {\ltyping i \delta \Delta \\ u : (\judge[\Delta] T) \in \Psi}
  {\ltyping{i}{u^\delta}{T}}

  \inferrule
  {\ljudge i \Gamma}
  {\ltyping{i}{\ze}{\Nat}}

  \inferrule
  {\ltyping{i}{t}{\Nat}}
  {\ltyping{i}{\su t}{\Nat}}
  
  \inferrule
  {\ltyping[\Psi][\Gamma, x : S]{i}{t}{T}}
  {\ltyping{i}{\lambda x. t}{S \func T}}

  \inferrule
  {\ltyping{i}{t}{S \func T} \\ \ltyping{i}{s}{S}}
  {\ltyping{i}{t\ s}{T}}

  \inferrule
  {\ljudge 1 \Gamma \\ \ltyping[\Psi][\Delta] 0 t T}
  {\ltyping{1}{\boxit t}{\cont[\Delta] T}}
  
  \inferrule
  {\ltyping 1 {s}{\cont[\Delta] T} \\ \mhighlight{\ljudge[\Psi] 0 \Delta} \\
    \mhighlight{\ljudge[\Psi] 0 T} \\
    \ljudge[\Psi] 1{T'} \\
    \ltyping[\Psi, u : (\judge[\Delta] T)][\Gamma[p(\id)]] 1 {t}{T'[p(\id)]}}
  {\ltyping 1 {\letbox u s t} T'}

  \inferrule
  {\ljudge[\Psi] 1 \Gamma \\ \ltyping[\Psi, g: \Ctx][\Gamma[p(\id)]] 1 t T}
  {\ltyping{1}{\Lambda g. t}{(g : \Ctx) \STo T}}

  \inferrule
  {\ltyping{1}{t}{(g : \Ctx) \STo T} \\ \ljudge[\Psi] 0 \Delta}
  {\ltyping{1}{t~\$~\Delta}{T[\id_{\Psi}, \Delta/g]}}
  
  \inferrule
  {\ljudge i \Gamma \\ \text{$\Gamma$ ends with $\cdot$} \\ |\Gamma| = m}
  {\ltyping i {\cdot^m}{\cdot}}

  \inferrule
  {\ljudge i \Gamma \\ g : \Ctx \in \Psi \\ \text{$\Gamma$ ends with $g$} \\ |\Gamma| = m}
  {\ltyping i {\cdot_g^m}{\cdot}}

  \inferrule
  {\ljudge i \Gamma \\ g : \Ctx \in \Psi \\ \text{$\Gamma$ ends with $g$} \\ |\Gamma| = m}
  {\ltyping i {\wk_g^m}{g}}

  \inferrule
  {\ltyping i {\delta}{\Delta} \\ \ltyping i {t}{T}}
  {\ltyping i {\delta, t/x}{\Delta, x : T}}
\end{mathpar}
In the typing rules, there are premises highlighted by \mhighlight{\text{shades}}. %
These shaded premises are necessary to establish the theorem of \emph{presupposition}
or \emph{syntactic validity}. %
After establishing presupposition, these premises can be derived from other premises
and thus technically can be omitted afterwards. %
Moreover, in the rule for global variables, the lookup of a global context $\Psi$ must
consider the effect of global weakenings as follows:
\begin{mathpar}
  \inferrule
  { }
  {u : B[p(\id)] \in \Psi, u : B}

  \inferrule
  {u : B \in \Psi}
  {u : B[p(\id)] \in \Psi , u' : B'}
\end{mathpar}
The typing rules also support lifting:
\begin{lemma}[Lifting] $ $
  \begin{itemize}
  \item If $\ltyping{0}{t}{T}$, then $\ltyping{1}{t}{T}$.
  \item If $\ltyping{0}{\delta}{\Delta}$, then $\ltyping{1}{\delta}{\Delta}$.
  \end{itemize}
\end{lemma}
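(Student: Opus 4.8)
The plan is to prove the two statements simultaneously, by mutual induction on the derivations of $\ltyping{0}{t}{T}$ and $\ltyping{0}{\delta}{\Delta}$. Note first that lifting does not touch the syntax: the term $t$ (resp.\ the local substitution $\delta$) is unchanged, and only the layer annotation on the judgment moves from $0$ to $1$. The decisive structural observation is that the four rules specific to layer $1$ — the rules for $\boxit t$, $\letbox u s t$, $\Lambda g. t$, and $t~\$~\Delta$ — all carry the constant $1$ as the layer in their conclusion, while no rule has a premise whose layer is \emph{forced} to be $0$ by a layer-$1$ conclusion; hence a derivation concluding at layer $0$ can only use the layer-generic rules (those parameterized by $i$), and every typing subderivation it contains again concludes at layer $0$. (The $\boxit t$ rule does descend from a layer-$1$ conclusion to a layer-$0$ body, but that is irrelevant here, since the derivation we induct on already sits at layer $0$.) This also makes the mutual induction well-founded: the cross-references between term typing and local-substitution typing, which occur in the rules for $u^\delta$ and $\delta, t/x$, always pass through strictly smaller derivations.

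Then for each layer-generic rule the argument is uniform: re-apply the same rule with $i := 1$. Its premises fall into three kinds. Recursive typing premises $\ltyping{0}{t'}{T'}$ and $\ltyping{0}{\delta'}{\Delta'}$ are discharged by the induction hypothesis. Well-formedness premises $\ljudge 0 \Gamma$ and $\ljudge 0 T$ are lifted by the previously established Lifting lemma for local contexts and types. Side conditions that mention no layer — the memberships $x : T \in \Gamma$, $u : (\judge[\Delta] T) \in \Psi$, $g : \Ctx \in \Psi$, the requirements that $\Gamma$ end with $\cdot$ or with $g$, and the equation $|\Gamma| = m$ — transfer verbatim. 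Concretely: the variable case combines $\ljudge 1 \Gamma$ (from Lifting) with $x : T \in \Gamma$; the $u^\delta$ case lifts its premise $\ltyping{0}{\delta}{\Delta}$ by the induction hypothesis; the $\cdot^m$, $\cdot_g^m$, and $\wk_g^m$ cases need only context-lifting together with their layer-free side conditions; and the rules for $\ze$, $\su t$, $\lambda x. t$, application, and the cons of local substitutions follow immediately from Lifting and/or the induction hypothesis.

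I do not expect a genuine obstacle: this is a routine mutual induction, with the previously proved Lifting lemma supplying every well-formedness side condition. The only step that warrants a moment of care is the claim that a layer-$0$ derivation neither invokes a layer-$1$-specific rule nor contains a layer-$1$ subderivation. This is settled by inspecting the rule set, but it is the hinge of the whole argument, since it is what guarantees that the case analysis is exhausted by the layer-generic rules; once it is in place, everything reduces to mechanically re-running each rule one layer up.
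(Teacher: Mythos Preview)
Your approach is correct and is exactly the expected one: the paper states this lemma without proof, but mutual induction on the derivations, reapplying each layer-generic rule at $i := 1$ and invoking the already-established Lifting lemma for contexts and types, is the intended argument. Your structural observation that a layer-$0$ derivation can only use the $i$-parametric rules is precisely what makes the case analysis exhaustive.
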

This lemma ensures that terms at layer $0$ are \emph{included} in layer $1$. 

Typing rules for global substitutions are defined as follows:
\begin{mathpar}
  \inferrule*
  {\vdash \Psi}
  {\typing[\Psi]{\cdot}{\cdot}}

  \inferrule*
  {\typing[\Psi]{\sigma}{\Phi} \\ \ljudge[\Psi] 0 \Gamma \\ \ljudge[\Psi] 0 T \\ \ltyping[\Psi][\Gamma[\sigma]] 0 {t}{T[\sigma]}}
  {\typing[\Psi]{\sigma, t/u}{\Phi, u : (\judge T)}}

  \inferrule*
  {\typing[\Psi]{\sigma}{\Phi} \\ \ljudge 0 \Gamma}
  {\typing[\Psi]{\sigma, \Gamma/g}{\Phi, g: \Ctx}}
\end{mathpar}

In the next section, we establish a set of syntactic properties as a basic sanity
check of the definitions, which are also useful in later section proving normalization. 

\subsection{Syntactic Properties of 2-layered Modal Type Theory with Context Variables}

In this section, we list syntactic properties that eventually leads to the
substitution lemma of terms and local substitutions for global substitutions. %
This lemma is our ``benchmark'' to ensure that the rules for the system make sense. %
During the process, we must establish other necessary syntactic properties. %
We elaborate the proofs an important and selected few. %
Other proofs in this section have been mechanized in Agda.

\begin{lemma}\labeledit{lem:cvar:gsubst-skip} $ $
  \begin{itemize}
  \item If $n$ is the length of $\sigma'$, then $t [q^n(p(id))][\sigma,t/u,\sigma'] =
    t [q^n(p(id))][\sigma,\Gamma/g,\sigma'] = t[\sigma,\sigma']$.
  \item If $n$ is the length of $\sigma'$, then $\delta [q^n(p(id))][\sigma,t/u,\sigma'] =
    \delta [q^n(p(id))][\sigma,\Gamma/g,\sigma'] = \delta[\sigma,\sigma']$.
  \end{itemize}
\end{lemma}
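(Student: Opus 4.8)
The plan is to prove both chains of equalities, simultaneously, by mutual structural induction on $t$ and $\delta$, keeping $n$, $\sigma$, $\sigma'$ and the ``filler'' of the skipped slot (the term of $\sigma,t/u,\sigma'$, resp.\ the local context of $\sigma,\Gamma/g,\sigma'$) universally quantified, and running the two variants in lockstep, since every step is insensitive to which kind of binding is being skipped. The one point where the induction reaches outside terms and local substitutions is the meta-application $t~\$~\Gamma_0$, whose second component is a local context; so I would first establish the companion statement $T[q^n(p(\id))][\sigma,t/u,\sigma'] = T[q^n(p(\id))][\sigma,\Gamma/g,\sigma'] = T[\sigma,\sigma']$, together with its analogue for local contexts, by a simpler mutual induction on types and local contexts, and then use it as a black box (types never contain terms, so there is no circularity). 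The intuition throughout is that weakening by $q^n(p(\id))$ re-indexes everything to point \emph{past} the inserted slot, so neither $t$ nor $\delta$ ever mentions the variable the outer substitution supplies there --- which is precisely why the filler may be left abstract.

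The congruence cases --- local variables $x$, $\ze$, $\su t$, $\lambda x. t$, applications $s~t$, $\boxit t$, and the cons case $\delta,t/x$ of local substitutions --- are immediate: for each, the relevant clause of global weakening and the relevant clause of global substitution both commute with the constructor, so the claim follows from the induction hypotheses on the immediate subterms; $t~\$~\Gamma_0$ is likewise immediate, from the induction hypothesis on $t$ and the companion lemma on $\Gamma_0$. The binder cases $\letbox u s t$ and $\Lambda g. t$ need a little care. Global weakening descends into the body as $q(q^n(p(\id))) = q^{n+1}(p(\id))$, while global substitution descends as $\sigma[p(\id)]$ extended with an identity entry ($u^\id/u$, resp.\ $g/g$); so, after rewriting with the fact that global weakening distributes over concatenation, $(\sigma,\sigma')[p(\id)] = \sigma[p(\id)],\sigma'[p(\id)]$ (immediate from the definition), one applies the induction hypothesis to the body at index $n+1$, with $\sigma[p(\id)]$ in place of $\sigma$, the $p(\id)$-weakened filler in the skipped slot, and $\sigma'[p(\id)]$ extended by the identity entry in place of $\sigma'$; the head $s$ of $\letbox u s t$ is handled directly at index $n$. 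Reassembling gives exactly $(\letbox u s t)[\sigma,\sigma']$, resp.\ $(\Lambda g. t)[\sigma,\sigma']$.

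What remains are the global- and context-variable cases: an occurrence $u^\delta$ (here $u$ is a generic global variable, which after re-indexing is distinct from the variable of the skipped slot), the weakening substitution $\wk_g^m$, and the empty substitutions $\cdot^m$ and $\cdot_g^m$ carrying an optional context variable. These all reduce to a single lookup sub-claim: if $v$ is a global or context variable of the codomain of $q^n(p(\id))$, with image $v'$ under that weakening, then $(\sigma,t/u,\sigma')(v') = (\sigma,\sigma')(v)$ and $(\sigma,\Gamma/g,\sigma')(v') = (\sigma,\sigma')(v)$ --- true because $v'$ addresses the same entry of $\sigma'$ when $v$ lies among the top $n$ bindings, and the same entry of $\sigma$, shifted over the inserted slot, otherwise (a routine computation from the way lookup in a global context tracks the insertion weakening $p(\id)$). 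Feeding this in --- together with the induction hypothesis on the carried $\delta$ in the $u^\delta$ case, and the clauses defining $\wk_g^m[\sigma]$ and $\cdot_g^m[\sigma]$ through $\sigma(g)$ --- collapses each case to the right-hand side.

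I expect the main obstacle to be exactly this last bit of bookkeeping: pinning down how $q^n(p(\id))$ re-indexes global and context variables so that the lookup sub-claim goes through uniformly for \emph{both} kinds of skipped binding, and tracking how the weakening and the outer substitution each descend under $\tletbox$ and $\Lambda$ so that the induction hypothesis is applicable at the shifted index $n+1$. Everything else is mechanical rewriting, which is why this lemma, like the bulk of the surrounding syntactic metatheory, is handled in the Agda mechanization.
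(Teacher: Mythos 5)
Your proof is correct. The paper itself does not spell out a proof for this lemma---it is among the results whose proofs are ``mechanized in Agda'' and omitted from the prose---so there is no textual argument to compare against; but your reconstruction is exactly the natural one, and I have checked that each step goes through. In particular: (i) the companion lemma for types and local contexts must indeed be proved first (a separate mutual induction, since $\cont[\Delta]T$ puts a local context inside a type and $\Gamma, x:T$ puts a type inside a local context, but terms never appear in either), and it is then usable as a black box in the $t\,\$\,\Gamma_0$ case; (ii) in the $\tletbox$ and $\Lambda$ cases your index-shifting is right---global weakening descends to $q^{n+1}(p(\id))$, global substitution descends to $(\sigma,\text{filler},\sigma')[p(\id)]$ extended by $u'^\id/u'$ (resp.\ $g'/g'$), so the induction hypothesis applies at $n+1$ with $\sigma[p(\id)]$, the $p(\id)$-weakened filler, and $\sigma'[p(\id)]$ extended by the identity entry, and the head $s$ of $\tletbox$ is handled at index $n$; (iii) the variable cases $u^\delta$, $\wk_g^m$, $\cdot_g^m$ reduce, via the induction hypothesis on the carried $\delta$ when present, to the single lookup claim that $(\sigma,\text{filler},\sigma')$ applied to the $q^n(p(\id))$-image of a variable coincides with $(\sigma,\sigma')$ applied to the original variable, which holds by a direct case split on whether the variable addresses $\sigma'$ (top $n$ entries, untouched by $q^n$) or $\sigma$ (shifted past the inserted slot by $p(\id)$), and the $\cdot^m$ case without a context variable is trivial since $\cdot^m[\sigma_0]=\cdot^m$. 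The one thing worth flagging for an actual mechanization is that the lookup claim is where the ``depending on name representation'' caveats in the paper's definitions cash out, and the exact bookkeeping for re-indexed global and context variables under $q^n(p(\id))$ is precisely what makes these lemmas tedious---but you identified that yourself.
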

This lemma allows to skip a binding in the middle of a global substitution according
to a global weakening.

A similar lemma holds for local substitutions:
\begin{lemma}\labeledit{lem:cvar:lsubst-skip} $ $
  \begin{itemize}
  \item If $n$ is the length of $\delta'$, then $t [q^n(p(id))][\delta,t/x,\delta'] =
    t[\delta,\delta']$.
  \item If $n$ is the length of $\delta'$, then $\delta [q^n(p(id))] \circ (\delta,t/x,\delta') =
    \delta \circ (\delta,\delta')$.
  \end{itemize}
\end{lemma}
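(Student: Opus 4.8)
The statement overloads $t$ and $\delta$: in part~1 the outer $t$ is a generic term, distinct from the $t$ in $t/x$, and in part~2 the outer $\delta$ is a generic substitution, distinct from the $\delta$ inside $(\delta, t/x, \delta')$. I will write $s$ for the generic term and $\delta_0$ for the generic substitution. The plan is to prove both parts simultaneously by mutual structural induction --- part~1 on $s$, part~2 on $\delta_0$ --- generalizing over $\delta'$ throughout (so ``$n$'' abbreviates $|\delta'|$ and is re-instantiated whenever $\delta'$ changes). The two parts are genuinely interdependent: the global-variable case $u^{\delta_1}$ of part~1 unfolds to $u^{\,\delta_1[q^n(p(\id))]\,\circ\,(\delta,t/x,\delta')}$ against $u^{\,\delta_1\,\circ\,(\delta,\delta')}$, which is exactly part~2 for $\delta_1$; and the cons case of part~2 feeds part~1 on the carried term. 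Before starting I record the bookkeeping that every case reuses: (i) global weakening distributes over a three-way split, $(\delta,t/x,\delta')[\gamma] = \delta[\gamma], t[\gamma]/x, \delta'[\gamma]$, and preserves $|\delta'|$; (ii) $\widehat{(\cdot)}$ and $\widecheck{(\cdot)}$ peel off all trailing cons's, so $\widehat{(\delta,t/x,\delta')} = \widehat{\delta} = \widehat{(\delta,\delta')}$ and likewise for $\widecheck{(\cdot)}$; (iii) $q^n(p(\id))$ has exactly one $p$ and $n$ many $q$'s; (iv) pushing a local substitution under $\lambda$ gives $q(\delta,t/x,\delta') = \delta[\id;p(\id)], t[\id;p(\id)]/x, (\delta'[\id;p(\id)], x/x)$, whose trailing block has length $n+1$.

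Most cases are routine. For $\ze$, $\su{s}$, application, and meta-application $s\ \$\ \Gamma$, both operations commute with the head constructor, so each case is an unfolding plus the induction hypothesis; for $\boxit{s}$ both operations are the identity. For the base local substitutions, weakening by $q^n(p(\id))$ bumps the superscript $m$ to $m+1$ (item (iii)), and then composing with $(\delta,t/x,\delta')$ overwrites that superscript with $\widehat{\delta}$ --- and, for $\cdot_{g?}^m$, fixes the context-variable slot to $\widecheck{\delta}$ --- by item (ii); the right-hand side $\cdot_{g?}^m \circ (\delta,\delta')$ or $\wk_g^m \circ (\delta,\delta')$ collapses to the very same thing. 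For the cons case $\delta_0 = \delta_1, r/y$, distributing gives $(\delta_1[q^n(p(\id))]\circ(\delta,t/x,\delta')),\,(r[q^n(p(\id))][\delta,t/x,\delta'])/y$, and the two components are closed by part~2 on $\delta_1$ and part~1 on $r$, reassembling to $(\delta_1\circ(\delta,\delta')),\,r[\delta,\delta']/y = \delta_0\circ(\delta,\delta')$.

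The binder cases only require instantiating the induction hypothesis at the right $n$. For $\lambda x.\,s'$, pushing the weakening in turns $q^n(p(\id))$ into $q^{n+1}(p(\id))$ and pushing the substitution in turns $(\delta,t/x,\delta')$ into $q(\delta,t/x,\delta')$, which by item (iv) is a three-way split with trailing block of length $n+1$; so the part~1 hypothesis applies at $n+1$ and folding $\lambda x.\,(-)$ over the result, using $q(\delta,\delta') = \delta[\id;p(\id)],\delta'[\id;p(\id)],x/x$, closes the case. For $\letbox{u}{s_1}{s_2}$ and $\Lambda g.\,s'$, the relevant subterm is globally weakened by $p(\id)$ when the substitution is pushed under the binder; since $(\delta,t/x,\delta')[p(\id)]$ is still a three-way split with trailing block of length $n$ (item (i)), the hypothesis applies at the same $n$, and the result folds back using $(\delta,\delta')[p(\id)] = \delta[p(\id)],\delta'[p(\id)]$. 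The $u^{\delta_1}$ case is the hand-off to part~2 described above.

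The one case carrying real content is the variable $y$: I must show that weakening $y$ by $q^n(p(\id))$ and then resolving it in $(\delta,t/x,\delta')$ yields the same term as resolving $y$ directly in $(\delta,\delta')$. The point is that $q^n(p(\id))$ re-indexes $y$ precisely so as to step over the inserted entry $t/x$: a variable lying below the top $n$ entries is left alone and is resolved by the shared tail $\delta'$, while a variable at or above position $n$ is shifted up by one and is resolved by $\delta$ on both sides, never landing on $t/x$. This is a small combinatorial fact about how lookup interacts with the canonical weakening, and it is where the argument must descend to whatever name/index discipline underlies the ``properly weakened'' clauses of the definitions. I expect this to be the main obstacle, and in a mechanized development it would likely be isolated as an auxiliary lemma about lookup in a three-way-split substitution under $q^n(p(\id))$; everything else is the shape-and-length bookkeeping that items (i)--(iv) reduce to mechanical checking. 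The whole argument runs in lockstep with the proof of Lemma~\ref{lem:cvar:gsubst-skip} for global substitutions.
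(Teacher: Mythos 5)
The paper gives no proof of this lemma; it is one of the ``other proofs in this section [that] have been mechanized in Agda,'' so there is no textual argument to compare against. Your proposal is essentially what one would expect: mutual structural induction (part~1 on the term, part~2 on the substitution), fully generalized over the triple so that the IH can be re-instantiated under binders, with the base-substitution cases discharged by the $\widehat{(\cdot)}$/$\widecheck{(\cdot)}$ calculations and the only non-mechanical content sitting in the variable case. Your index analysis for that case is sound under the usual de~Bruijn reading of $q^n(p(\id))$: indices below $n$ resolve in the shared tail $\delta'$ unchanged, and indices at or above $n$ are shifted past the inserted $t/x$ and resolve in $\delta_0$ on both sides. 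One small clarification worth making explicit if you wrote this up: the induction must generalize not only over $\delta'$ (so $n$ can change) but over the entire triple $(\delta_0, t, \delta')$, since the $\tletbox$ and $\Lambda$ cases hit all three components with $[p(\id)]$ before the IH is applied. Your item~(i) makes it clear you know this; just state it as part of the IH.
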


\begin{lemma}[Composition of Global Substitutions]$ $
  \begin{itemize}
  \item $T[\sigma][\sigma'] = T[\sigma \circ \sigma']$
  \item $\Gamma[\sigma][\sigma'] = \Gamma[\sigma \circ \sigma']$
  \end{itemize}
\end{lemma}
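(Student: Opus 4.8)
The plan is to prove both equations simultaneously by structural induction on $T$ and $\Gamma$ (equivalently, on their well-formedness derivations $\ljudge i T$ and $\ljudge i \Gamma$), keeping the statement universally quantified over the pair of global substitutions $\sigma,\sigma'$. The base cases will be immediate unfoldings: $\Nat[\sigma][\sigma'] = \Nat = \Nat[\sigma\circ\sigma']$ and $\cdot[\sigma][\sigma'] = \cdot = \cdot[\sigma\circ\sigma']$. The function-type case $S\func T$, the context-extension case $\Gamma,x:T$, and the contextual-type case $\cont[\Delta]T$ all follow by unfolding the substitution operation one step and applying the induction hypotheses to the immediate subterms (for $\cont[\Delta]T$, once to $\Delta$ and once to $T$).

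Two cases will carry the real content. First, the context-variable case $\Gamma = g$: here $g[\sigma][\sigma'] = (\sigma(g))[\sigma']$ while $g[\sigma\circ\sigma'] = (\sigma\circ\sigma')(g)$, so it suffices to establish $(\sigma(g))[\sigma'] = (\sigma\circ\sigma')(g)$. I would prove this — together with the companion fact about term-variable lookups needed later for the term/local-substitution version — by a side induction on $\sigma$: when $g$ is the last binding, $\sigma = \sigma_0,\Gamma_0/g$ and both sides reduce to $\Gamma_0[\sigma']$ by the clause $(\sigma_0,\Gamma_0/g)\circ\sigma' = (\sigma_0\circ\sigma'),\Gamma_0[\sigma']/g$; otherwise the lookup and the composition both skip the last binding, and the side-induction hypothesis applies.

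The genuine obstacle is the meta-function case $(g:\Ctx)\STo T$. Unfolding the definition gives $((g:\Ctx)\STo T)[\sigma][\sigma'] = (g:\Ctx)\STo(T[q(\sigma)][q(\sigma')])$ and $((g:\Ctx)\STo T)[\sigma\circ\sigma'] = (g:\Ctx)\STo(T[q(\sigma\circ\sigma')])$, where $q(\sigma) = \sigma[p(\id)],g/g$. Applying the induction hypothesis on $T$ at the substitutions $q(\sigma)$ and $q(\sigma')$ reduces the goal to the purely algebraic identity $q(\sigma)\circ q(\sigma') = q(\sigma\circ\sigma')$. Expanding the left side with the composition clauses yields $(\sigma[p(\id)]\circ q(\sigma')),g/g$, since $g$ looks up to itself in $q(\sigma')$; and $\sigma[p(\id)]\circ q(\sigma')$ equals $\sigma\circ(\sigma'[p(\id)])$ because each component $t[p(\id)]$ (resp.\ $\Gamma_0[p(\id)]$) of $\sigma[p(\id)]$, once hit by $q(\sigma') = \sigma'[p(\id)],g/g$, simplifies to $t[\sigma'[p(\id)]]$ (resp.\ $\Gamma_0[\sigma'[p(\id)]]$) by \Cref{lem:cvar:gsubst-skip} at $n = 0$ and its evident analogue for types and local contexts; finally $\sigma\circ(\sigma'[p(\id)]) = (\sigma\circ\sigma')[p(\id)]$ holds componentwise by the Algebra of Global Substitutions lemma (the clauses $t[\sigma][\gamma] = t[\sigma[\gamma]]$, $T[\sigma][\gamma] = T[\sigma[\gamma]]$, $\Gamma[\sigma][\gamma] = \Gamma[\sigma[\gamma]]$). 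Assembling these pieces produces $q(\sigma\circ\sigma')$ and closes the case. The one delicate point is ordering these weakening/substitution commutation facts correctly; everything else is routine bookkeeping, and the companion term/local-substitution lemma is handled in the same way.
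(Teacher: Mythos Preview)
Your proposal is correct and follows the natural mutual structural induction; the paper gives no explicit proof here (deferring to its Agda mechanization), and this is the expected argument. The only nontrivial case is indeed the meta-function type, and your reduction to $q(\sigma)\circ q(\sigma') = q(\sigma\circ\sigma')$ via \Cref{lem:cvar:gsubst-skip} (together with its straightforward local-context analogue) and the Algebra of Global Substitutions is exactly right.
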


\begin{lemma}[Composition and Associativity of Local Substitutions] $ $
  \begin{itemize}
  \item $t[\delta][\delta'] = t[\delta \circ \delta']$
  \item $(\delta \circ \delta') \circ \delta'' = \delta \circ (\delta' \circ \delta'')$
  \end{itemize}
\end{lemma}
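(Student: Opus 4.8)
The two clauses are genuinely mutually recursive, so the plan is to prove them together by one induction: clause~(1) driven by the structure of the term $t$, and clause~(2) by the structure of the first substitution $\delta$, with a single well-founded measure in which every cross-call descends into a proper subterm. The only appeal of clause~(1) to clause~(2) is the global-variable case $t=u^{\delta_0}$, where unfolding $u^{\delta_0}[-]$ twice turns the goal into $u^{(\delta_0\circ\delta)\circ\delta'}=u^{\delta_0\circ(\delta\circ\delta')}$, discharged by clause~(2) on the \emph{sub}substitution $\delta_0$; conversely, the only appeal of clause~(2) to clause~(1) is the cons case $\delta=\delta_1,t/x$, which after unfolding $\circ$ twice splits into clause~(2) on $\delta_1$ and clause~(1) on the stored term $t$ — both proper subterms, so no circularity. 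The plain congruence cases of clause~(1) ($\ze$, $\su t$, application) fall straight out of the induction hypotheses; $\boxit t$ is immediate because a local substitution is the identity on a boxed term; and $t~\$~\Gamma$ only touches the function position.

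Before the main induction I would clear away two bookkeeping facts about the base-case substitutions. First, $\widehat{\delta\circ\delta'}=\widehat{\delta'}$, a one-line induction on $\delta$: each of the three base clauses of $\circ$ puts $\widehat{\delta'}$ in the numeric index and the cons clause is hereditary. Second, a characterisation of $\widecheck{\delta\circ\delta'}$: it equals $\widecheck{\delta}$ when the latter is a context variable, and $\widecheck{\delta'}$ otherwise, again by induction on $\delta$. With these, the base cases of clause~(2) go through: for $\delta\in\{\wk_g^m,\cdot_g^m\}$ both sides collapse to $\wk_g^{\widehat{\delta''}}$, resp.\ $\cdot_g^{\widehat{\delta''}}$, using the $\widehat$-fact, while for $\delta=\cdot^m$ the $\widecheck$-characterisation is exactly what is needed to see that the context variable (if any) exposed on the left by $\cdot^m\circ\delta'$ agrees with $\widecheck{\delta'\circ\delta''}$ on the right; this is the one place where the optional context variable carried by an empty local substitution genuinely matters.

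The remaining cases of clause~(1) are the three binders. For $\letbox{u}{s}{t}$ and $\Lambda g.\,t$ the outer substitution is pushed under the binder by a \emph{global} weakening $\delta[p(\id)]$, so after the induction hypothesis on the body the goal reduces to $\delta[p(\id)]\circ\delta'[p(\id)]=(\delta\circ\delta')[p(\id)]$, which is an instance of the already-proved ``Algebra of Local Substitutions'' lemma. The case $\lambda x.\,t$ is the genuinely new one: the substitution is pushed under by $q(\delta)=\delta[\id;p(\id)],x/x$, so the goal becomes $q(\delta)\circ q(\delta')=q(\delta\circ\delta')$. I would establish this as a separate helper; unfolding $q$ and the cons clause of $\circ$ reduces it, via the ``Algebra of local weakenings'' identity $(\delta\circ\delta')[\tau]=\delta\circ(\delta'[\tau])$, to the weakening--extension cancellation facts
\[
  (\delta[\id;p(\id)])\circ(\rho,s/x)=\delta\circ\rho
  \qquad\text{and}\qquad
  (t[\id;p(\id)])[\rho,s/x]=t[\rho],
\]
which say that pre-composing with the projection $p(\id)$ and then re-extending is a no-op. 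These are proved by their own mutual induction (on $\delta$, resp.\ on $t$) on the same subterm measure, appealing only to the weakening algebra and never to associativity, so the overall recursion stays well-founded. With the helper in place the $\lambda$ case closes, and so does the variable case $t=x$: unfolding $x[\delta]$ either returns a stored term (then $x[\delta][\delta']=s[\delta']=(\delta\circ\delta')(x)$ outright), or recurses on the tail of a cons, or — for a base-case $\delta$ — produces a re-indexed variable whose image under $\delta'$ is pinned down by the $\widehat$-fact and the base clauses of $\circ$.

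The main obstacle is this small tower of lemmas about how the two shapes of base-case local substitution, $\wk_g^m$ and $\cdot_{g?}^m$, interact with local weakenings and with cons-extensions — getting the numeric-index ($\widehat$) and context-variable ($\widecheck$) bookkeeping exactly right, and proving the weakening--extension cancellation cleanly — together with the discipline of checking that the whole bundle (the two clauses plus all helpers) descends along one common subterm measure, so that no helper secretly presupposes the associativity it is meant to feed. Once that ordering is fixed, every individual case is a short unfolding.
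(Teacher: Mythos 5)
The paper does not actually prove this lemma in the text — it is one of the "other proofs" delegated to the Agda mechanization — so there is no paper proof to compare against. Your proposal is a plausible reconstruction and the main structure is sound: the mutual induction between clause~(1) and clause~(2), the cross-call pattern (global-variable case of (1) calling (2) on the embedded substitution, cons case of (2) calling (1) on the stored term), the $\widehat{\delta\circ\delta'}=\widehat{\delta'}$ and $\widecheck$ characterisations for the base cases, the reduction of $\letbox$/$\Lambda$ to the global-weakening algebra $(\delta\circ\delta')[\gamma]=\delta[\gamma]\circ\delta'[\gamma]$, and the $q(\delta)\circ q(\delta')=q(\delta\circ\delta')$ helper reduced via $(\delta\circ\delta')[\tau]=\delta\circ(\delta'[\tau])$ to the two cancellation facts — all of this checks out on unfolding.

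The one place where the proposal is genuinely loose is the variable case, and in particular the "recurses on the tail of a cons" branch. If $\delta=\delta_1,s/y$ with $y\neq x$, the goal $x[\delta][\delta']=x[\delta\circ\delta']$ becomes $x[\delta_1][\delta']=x[\delta_1\circ\delta']$: this is a recursive call of clause~(1) with the \emph{same} term $t=x$ and a structurally smaller $\delta$, so the measure you announce (``clause~(1) driven by the structure of the term $t$'') does not decrease. The usual repair is either to refine the measure to a lexicographic pair (size of $t$, size of $\delta$) — which then requires re-checking that the cross-calls into clause~(2) from the $u^{\delta_0}$ case stay compatible — or, more cleanly, to factor out a standalone lookup lemma $\delta(x)[\delta']=(\delta\circ\delta')(x)$ proved by induction on $\delta$ alone, depending only on the cons clause of $\circ$ and the cancellation facts, so that the main induction on $t$ dispatches the variable case in one step. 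You gesture at a ``re-indexed variable'' for a base-case $\delta$, but in a well-typed de Bruijn reading that subcase either does not arise (there are no concrete bindings in $\cdot$ or $g$ alone) or needs the $\widehat$-bookkeeping you mention spelled out against the base clauses of $\circ$; either way it should not be left implicit. With the lookup lemma extracted, the rest of your outline closes without circularity.
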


\begin{lemma}[Typing of Local Weakening Substitutions]
  If $\ljudge i {\Delta, \Gamma}$, then $\ltyping[\Psi][\Delta, \Gamma] i
  {\wk^{|\Gamma|}_{\Delta}}{\Delta}$. 
\end{lemma}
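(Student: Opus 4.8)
The plan is to prove this by structural induction on the local context $\Delta$, keeping $\Psi$, the layer $i$, and the suffix telescope $\Gamma$ universally quantified, so that the induction hypothesis may be reinstantiated with a longer $\Gamma$. The three cases mirror the three clauses defining $\wk^{|\Gamma|}_{\Delta}$: it is $\cdot^{|\Gamma|}$ when $\Delta = \cdot$, it is $\wk_g^{|\Gamma|}$ when $\Delta = g$, and it is $\wk^{1+|\Gamma|}_{\Delta'}, x/x$ when $\Delta = \Delta', x : T$; each is matched against the corresponding typing rule for local substitutions.

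The two base cases follow directly from the rules for $\cdot^m$ and $\wk_g^m$. For $\Delta = \cdot$ we must supply $\ljudge i {\cdot, \Gamma}$ (this is the hypothesis), that $\cdot, \Gamma$ ends with $\cdot$ (clear, since $\Gamma$ is a telescope of variable bindings), and $|\cdot, \Gamma| = |\Gamma|$ (length bookkeeping: the length counts only variable bindings). For $\Delta = g$ we additionally need $g : \Ctx \in \Psi$, obtained by inverting the well-formedness derivation $\ljudge i {g, \Gamma}$ down to its base case, together with $|g, \Gamma| = |\Gamma|$. The content is in the cons case $\Delta = \Delta', x : T$: the goal $\ltyping[\Psi][\Delta', x : T, \Gamma] i {\wk^{1+|\Gamma|}_{\Delta'}, x/x}{\Delta', x : T}$ is discharged by the rule for $\delta, t/x$, leaving (a) $\ltyping[\Psi][\Delta', x : T, \Gamma] i {\wk^{1+|\Gamma|}_{\Delta'}}{\Delta'}$ and (b) $\ltyping[\Psi][\Delta', x : T, \Gamma] i {x}{T}$. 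Premise (b) is the variable rule applied to $x : T \in \Delta', x : T, \Gamma$, using the hypothesis for well-formedness of the ambient context; crucially there is no weakening mismatch, since in this calculus types never mention the local context, so the rule asks for $x : T$ on the nose. For (a), note that $\Delta', x : T, \Gamma$ is $\Delta'$ extended by the suffix telescope $x : T, \Gamma$ (that is, $\Gamma$ preceded by the binding $x : T$), of length $1 + |\Gamma|$, and that $\ljudge i {\Delta', x : T, \Gamma}$ is precisely the hypothesis; hence the induction hypothesis for the smaller context $\Delta'$, instantiated at this telescope, yields (a) immediately.

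I expect the only delicate points to be (i) committing to the generalization over $\Gamma$ before starting the induction, since the cons case genuinely consumes the hypothesis at a longer telescope — the case $\Gamma = \cdot$, i.e. $\wk^0_{\Delta} = \id_{\Delta}$, then falls out as an instance, so no separate ``identity local substitution is well-typed'' lemma is required; and (ii) the length and variable bookkeeping, namely that $|\cdot, \Gamma| = |g, \Gamma| = |\Gamma|$ and $|x : T, \Gamma| = 1 + |\Gamma|$ so the exponents produced by $\wk$ match the side conditions $|\Gamma'| = m$ in the rules for $\cdot^m$ and $\wk_g^m$, and that the $x$ in $\wk^{1+|\Gamma|}_{\Delta'}, x/x$ resolves to the binding introduced by $\Delta', x : T$ inside $\Delta', x : T, \Gamma$ — immediate in the named presentation, the usual shift-past-$\Gamma$ under de Bruijn indices. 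A viable alternative is induction on $\Gamma$, using separately that local weakening preserves typing of local substitutions and that $\wk^{m+1}_{\Delta} = (\wk^m_{\Delta})[\id; p(\id)]$; I prefer the induction on $\Delta$ because it needs only inversion of context well-formedness and the substitution typing rules.
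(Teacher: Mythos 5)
Your proof is correct, and the approach --- structural induction on $\Delta$ with the suffix telescope $\Gamma$ held universally quantified, so that the cons case may reinstantiate the induction hypothesis at the longer telescope $x : T, \Gamma$ --- is the natural one given the definition of $\wk^m_{\Delta}$ and the typing rules for local substitutions. The paper states this lemma without proof, so there is nothing to compare against directly; the bookkeeping you identify (the base of $\Delta, \Gamma$ is that of $\Delta$; $|\cdot,\Gamma| = |g,\Gamma| = |\Gamma|$; the variable rule requires no shift since types in this fragment never mention local variables) is exactly what a careful proof needs, and your observation that inducting on $\Gamma$ instead would force a separate proof of the identity case (itself by induction on $\Delta$) plus a local-weakening lemma and the algebraic fact $\wk^{m+1}_{\Delta} = (\wk^m_{\Delta})[\id; p(\id)]$ is a fair assessment of why inducting on $\Delta$ is the cleaner route.
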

The corollary is the well-typedness of local identity substitution:
\begin{corollary}
  If $\ljudge i \Gamma$, then $\ltyping i {\id_{\Gamma}}{\Gamma}$. 
\end{corollary}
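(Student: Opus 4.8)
The plan is to obtain the corollary immediately from the preceding Lemma (Typing of Local Weakening Substitutions) by specializing its suffix to be empty. Concretely, I apply that lemma to the trivial decomposition $\Gamma = \Gamma, \cdot$, that is, with its prefix instantiated to $\Gamma$ and its trailing context instantiated to the empty local-context extension. The hypothesis $\ljudge i {\Gamma, \cdot}$ is then literally the hypothesis $\ljudge i \Gamma$ of the corollary, since appending the empty extension does not change a local context, and the conclusion specializes to $\ltyping[\Psi][\Gamma] i {\wk^{0}_{\Gamma}}{\Gamma}$ because the empty extension has length $0$. Unfolding the definition $\id_{\Gamma} := \wk^{0}_{\Gamma}$ turns this into exactly $\ltyping i {\id_{\Gamma}}{\Gamma}$, which is the claim.

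So the proof is just one instantiation of the Lemma plus a definitional unfolding, and the only points to check carefully are the conventions that concatenating the empty extension is the identity on local contexts and that the length function sends it to $0$ — which is what makes the superscript on $\wk$ come out as $0$. There is, in short, no real obstacle here; it is a bookkeeping corollary, and that is precisely why it is stated as a corollary of the Lemma rather than proved on its own.

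It is worth noting \emph{why} the Lemma is phrased with an explicit prefix, since this is the one subtlety lurking behind the corollary: the naive induction directly on the derivation of $\ljudge i \Gamma$ does not close. In the cons case $\Gamma = \Delta, x : T$ one has $\id_{\Gamma} = \wk^{1}_{\Delta}, x/x$, so the cons typing rule for $\delta, t/x$ demands a derivation of $\ltyping[\Psi][\Delta, x : T] i {\wk^{1}_{\Delta}}{\Delta}$, which is not the inductive hypothesis about $\wk^{0}_{\Delta}$. Generalizing to an arbitrary prefix — equivalently, to $\wk^{m}_{\Delta}$ for all $m$ in a context of the form $\Delta, \Gamma$ — is exactly the content of the Lemma (itself proved by induction on the prefix $\Delta$, with the $\cdot$, $g$, and cons base/step cases discharged by the typing rules for $\cdot^m$, $\wk_g^m$, and $\delta, t/x$). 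Hence routing the corollary through the Lemma is the economical choice, and the above is the proof I would write.
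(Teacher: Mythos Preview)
Your proposal is correct and matches the paper's intended approach: the corollary is obtained immediately from the preceding Lemma (Typing of Local Weakening Substitutions) by taking the suffix to be empty, so that $\wk^{0}_{\Gamma} = \id_{\Gamma}$. Your additional commentary on why the direct induction fails and the generalization is needed is accurate and helpful, though the paper itself leaves this implicit.
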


The next few questions require well-formedness or typing judgments to work:
\begin{lemma}\labeledit{lem:cvar:wk-p-n-ty} $ $
  \begin{itemize}
  \item If $\ljudge i T$, then $T[\wk^n_{\Psi}] = T[p^n(\id)]$. 
  \item If $\ljudge i \Gamma$, then $\Gamma[\wk^n_{\Psi}] = \Gamma[p^n(\id)]$. 
  \end{itemize}
\end{lemma}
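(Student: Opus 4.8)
The plan is to prove both statements simultaneously by mutual induction on the derivations of $\ljudge[\Psi]{i}{T}$ and $\ljudge[\Psi]{i}{\Gamma}$, with $\Psi$, $n$, and $i$ arbitrary. The global substitution operation and the global weakening operation are defined by the same structural recursion on types and local contexts, and every constructor clause matches between the two definitions except for the meta-function binder. So for all cases other than $(g : \Ctx) \STo T$ the two sides reduce, after one unfolding, to the same expression built from the actions on the immediate components, and the induction hypotheses close the goal: the leaves $\Nat$ and $\cdot$ are immediate; $S \func T$ and $\Gamma, x : T$ follow from the relevant induction hypotheses applied componentwise; and $\cont[\Delta]{T}$ at layer $1$ follows from the mutual induction hypothesis applied to its layer-$0$ premises $\ljudge[\Psi]{0}{\Delta}$ and $\ljudge[\Psi]{0}{T}$ (alternatively: layer-$0$ data is fixed by every global weakening by the no-op lemma already recorded, and, given that $\wk^n_\Psi$ maps each context variable to itself, also by $\wk^n_\Psi$, so both sides reduce to $\cont[\Delta]{T}$).

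For the context-variable base case $\ljudge[\Psi]{i}{g}$ I would first establish the small auxiliary fact $g[\wk^n_\Psi] = g = g[p^n(\id)]$ — that the schematic identity substitution $\wk^n_\Psi$ sends each context variable of $\Psi$ to itself — by a direct induction on the definition of $\wk^n_\Psi$, tracking the name-representation shifts in the usual way. The substantive case is the meta-function binder, where substitution yields $(g : \Ctx) \STo (T[q(\wk^n_\Psi)])$ with $q(\wk^n_\Psi) = \wk^n_\Psi[p(\id)], g/g$, while weakening yields $(g : \Ctx) \STo (T[q(p^n(\id))])$. Here I would prove, by induction on $\Psi$ and using the algebra of global weakenings and of global substitutions (in particular $\sigma[\gamma][\gamma'] = \sigma[\gamma \circ \gamma']$ together with the weakening-composition law $\gamma \circ p(\gamma') = p(\gamma \circ \gamma')$), the coherence sub-lemma that $q(\wk^n_\Psi)$ is again the schematic identity substitution for the extended global context $\Psi, g : \Ctx$, and that $q(p^n(\id))$ acts on well-formed data over $\Psi, g : \Ctx$ as the corresponding iterated weakening. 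Once both sides are re-expressed over $\Psi, g : \Ctx$, the induction hypothesis applied to the subderivation $\ljudge[\Psi, g : \Ctx]{1}{T}$ discharges the case.

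I expect the binder case — specifically, pinning down the coherence sub-lemma — to be the main obstacle: it must be simultaneously strong enough to rewrite both the $\wk^n_\Psi$ side and the $p^n(\id)$ side of $(g : \Ctx) \STo T$, and provable by a clean induction on $\Psi$, which requires being careful about where the $n$ ``extra'' global bindings introduced by $\wk^n_\Psi$ and by $p^n(\id)$ sit and checking that $q$ treats them compatibly. It may well be cleanest to generalize the main statement slightly (carrying the codomain explicitly, or quantifying over the family $\{p^n(\id)\}_n$ uniformly) so that the recursive appeal in the binder case lands exactly on the generalized hypothesis. All the remaining work — the algebra lemmas cited and the name-representation bookkeeping in the $g$ case — is routine and of the same flavour as the neighbouring lemmas already mechanized in Agda.
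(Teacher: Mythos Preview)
Your proposal is essentially correct and matches the paper's (implicit) approach: the paper states this lemma without proof, deferring to its Agda mechanization, and for the analogous term-level lemma it explicitly remarks that ``an intrigued generalization'' is needed to handle global-context extensions under binders --- exactly the obstacle you isolate in the $(g : \Ctx) \STo T$ case. One precision to flag: your coherence sub-lemma as literally stated will not go through, since $q(\wk^n_\Psi)$ has domain $\Psi,\Phi,g:\Ctx$ while $\wk^n_{\Psi,g:\Ctx}$ has domain $\Psi,g:\Ctx,\Phi'$, and likewise $q(p^n(\id))$ is not of the form $p^{n'}(\id)$; but you already anticipate this and your fallback --- generalizing the statement so the IH covers $q$-lifted weakenings directly --- is the right move and is what the mechanization does.
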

This lemma proves that the global weakening substitutions behaves exactly like global
weakenings.

\begin{lemma}[Naturality]$ $
  \begin{itemize}
  \item If $\ljudge[\Psi,g:\Ctx] i T$ and $\gamma : \Phi \To_g \Psi$, then
    $T[\id_{\Psi}, \Gamma/g][\gamma] = T[q(\gamma)][\id_{\Phi},\Gamma[\gamma]/g]$.
  \item If $\ljudge[\Psi,g:\Ctx] i \Delta$ and $\gamma : \Phi \To_g \Psi$, then
    $\Delta[\id_{\Psi}, \Gamma/g][\gamma] = \Delta[q(\gamma)][\id_{\Phi},\Gamma[\gamma]/g]$.
  \end{itemize}
\end{lemma}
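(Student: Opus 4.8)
The plan is to read this as a naturality statement: performing the context-variable substitution $g \mapsto \Gamma$ and then the global weakening $\gamma$ should agree with weakening first and substituting $g \mapsto \Gamma[\gamma]$ afterwards. I will assume, as the well-typedness of $\id_{\Psi}, \Gamma/g$ already requires, that $\Gamma$ is a layer-$0$ local context over $\Psi$ and that $\gamma : \Phi \To_g \Psi$, so that both sides live in $\Phi$, and I will proceed by mutual induction on $\ljudge[\Psi, g : \Ctx] i T$ and $\ljudge[\Psi, g : \Ctx] i \Delta$. The structural cases $\Nat$, $S \func T$, $\cont[\Delta']{T'}$, and $\Gamma', x : T'$ are congruences: substitution and weakening act componentwise, so after pushing both inward the goal follows from the induction hypotheses; in the $\cont[\Delta']{T'}$ case one additionally recalls that layer-$0$ types contain no context variables, so its type component is literally unchanged and only the enclosed local context invokes the IH. Of the base cases for $\Delta$, the empty context is immediate, a context variable $g' \neq g$ reduces on both sides to $g'[\gamma]$, and for $\Delta = g$ itself both sides come out to $\Gamma[\gamma]$: on the left because looking $g$ up in $\id_{\Psi}, \Gamma/g$ yields $\Gamma$, which is then weakened by $\gamma$; on the right because $g[q(\gamma)] = g$ (the freshest global binding, in whatever name representation) and looking that up in $\id_{\Phi}, \Gamma[\gamma]/g$ yields $\Gamma[\gamma]$.

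The delicate case is the meta-function type $(g' : \Ctx) \STo T'$, where $T'$ is well-formed in $\Psi, g : \Ctx, g' : \Ctx$: both substitution and weakening descend under the binder as $q(-)$, and a naive appeal to the induction hypothesis fails because $g$ is no longer the last global binding. Rather than strengthen the IH to substitute a context variable sitting under a trailing block of context-variable bindings, I would sidestep the mismatch by collapsing both sides to a single global substitution using the algebra already in hand. By the Algebra of Global Substitutions, $T[\sigma][\gamma] = T[\sigma[\gamma]]$, so the left-hand side is $T[(\id_{\Psi}, \Gamma/g)[\gamma]]$; identifying $T[q(\gamma)]$ with $T[q(\widehat\gamma)]$ via the coincidence of a global weakening with its evident substitution form — the natural generalization of Lemma~\ref{lem:cvar:wk-p-n-ty}, which records exactly this for $p^n(\id)$ against $\wk^n_{\Psi}$ — and then applying the Composition of Global Substitutions, the right-hand side is $T[q(\widehat\gamma) \circ (\id_{\Phi}, \Gamma[\gamma]/g)]$. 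The goal therefore reduces to the equality of global substitutions from $\Phi$ into $\Psi, g : \Ctx$, namely $(\id_{\Psi}, \Gamma/g)[\gamma] = q(\widehat\gamma) \circ (\id_{\Phi}, \Gamma[\gamma]/g)$; unfolding the composition, using the skip lemma (Lemma~\ref{lem:cvar:gsubst-skip}) — or its analogue for composition of global substitutions — to discard the entries of $\id_{\Phi}$ that $\widehat\gamma[p(\id)]$ does not observe, and simplifying with the identity laws, this collapses to the single residual identity $\id_{\Psi}[\gamma] = \widehat\gamma$. The same reduction handles every other case uniformly, and the local-context bullet goes through verbatim, since the Algebra and Composition of Global Substitutions are stated for local contexts as well.

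I expect the main obstacle to be exactly this $q$-bookkeeping: stating the ``a global weakening acts as its substitution form'' lemma at the generality needed here, and discharging the residual $\id_{\Psi}[\gamma] = \widehat\gamma$, each of which calls for a short auxiliary induction on $\gamma$ (or on $\Psi$) but is otherwise routine and of the kind already mechanized in Agda. No case involves a genuinely new idea; the real risk is keeping the heavily overloaded $\id$, $\wk$, $q$, and the chosen representation of context variables consistent — in particular verifying that $q(\gamma)$ really does keep $g$ as the freshest binding, so that the lookup in $\id_{\Phi}, \Gamma[\gamma]/g$ returns $\Gamma[\gamma]$ and not something still in need of weakening.
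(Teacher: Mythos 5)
The paper states this lemma without giving a proof, offering only the remark that it corresponds to naturality in the presheaf presentation, so there is no written argument in the paper to compare yours against. Your argument is sound, and your diagnosis of the obstacle is the right one: a naive induction on $T$ really does break at the meta-function case $(g' : \Ctx) \STo T'$, because after descending under the binder the substitution is $q(\id_{\Psi}, \Gamma/g) = (\id_{\Psi}, \Gamma/g)[p(\id)], g'/g'$ and the variable $g$ being replaced is no longer outermost, so the IH as stated no longer applies. The algebraic collapse you propose — converting both sides, via $T[\sigma][\gamma] = T[\sigma[\gamma]]$, the coincidence of a global weakening with its substitution form, and $T[\sigma][\sigma'] = T[\sigma \circ \sigma']$, into a single equality of global substitutions $(\id_{\Psi}, \Gamma/g)[\gamma] = q(\widehat\gamma) \circ (\id_{\Phi}, \Gamma[\gamma]/g)$ — is a legitimate and clean way around this; one could instead strengthen the inductive hypothesis to permit a trailing block of bindings after $g$, which the paper hints at for the related Lemma~\ref{lem:cvar:wk-p-n-tm}, but the substitution-algebra route gets you there without any induction on $T$ at all.

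Two small simplifications worth noting. First, rather than proving ``$T[\gamma] = T[\widehat\gamma]$'' by a fresh induction, you can simply \emph{define} $\widehat\gamma := \id_{\Psi}[\gamma]$; then $T[\widehat\gamma] = T[\id_{\Psi}[\gamma]] = T[\id_{\Psi}][\gamma] = T[\gamma]$ follows in one line from the algebra lemma and the $n = 0$ instance of Lemma~\ref{lem:cvar:wk-p-n-ty}, and the identity $\widehat{q(\gamma)} = q(\widehat\gamma)$ you need is a short unfolding using $p(\id) \circ q(\gamma) = p(\gamma) = \gamma \circ p(\id)$. With that in hand your residual goal $\id_{\Psi}[\gamma] = \widehat\gamma$ is definitional. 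Second, once the collapse is in place the earlier structural case analysis (\(\Nat\), $\func$, $\cont$, cons) is unnecessary — as you yourself observe at the end, the algebraic reduction is uniform, so the congruence cases are dead weight and could be cut; the proof is not really an induction on $T$ at all.

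One pedantic caveat on the $\cont[\Delta']{T'}$ case, in case you do keep the structural presentation: what makes the type component $T'$ inert is not that layer-0 types ``contain no context variables'' — their global context may well contain a $g : \Ctx$ binding — but that the layer-0 type grammar ($\Nat$ and $\func$ only) gives substitution and weakening nothing to act on, which is exactly the content of the paper's lemma that $\gamma$ acts as the identity on layer-0 types. The local-context index $\Delta'$, by contrast, can genuinely mention $g$, so it is the only part that invokes the IH, as you say. None of this changes the outcome; it is just a tightening of the stated reason.
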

The naturality lemma finds correspondence in the characterization of a presheaf
category of a type theory in general, which instructs how $q$ weakenings can be used
to swap a global weakening and a global substitution. %

\begin{lemma}[Local Identity]\labeledit{lem:cvar:loc-id}$ $
  \begin{itemize}
  \item If $\ltyping i t T$, then $t[\id_{\Gamma}] = t$. 
  \item If $\ltyping i \delta \Delta$, then $\delta \circ \id_{\Gamma} = \delta$. 
  \end{itemize}
\end{lemma}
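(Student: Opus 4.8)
The plan is to prove both statements simultaneously by mutual induction on the typing derivations $\ltyping i t T$ and $\ltyping i \delta \Delta$, after first isolating a few structural facts about the local weakening substitutions $\wk^m_\Gamma$, of which $\id_\Gamma = \wk^0_\Gamma$ is the $m = 0$ instance. Before starting the main induction I would prove, each by an easy induction on $\Gamma$: (i) $\widehat{\wk^m_\Gamma} = |\Gamma| + m$, and $\widecheck{\wk^m_\Gamma}$ is the base context variable of $\Gamma$ when $\Gamma$ ends in some $g$ and is absent when $\Gamma$ ends in $\cdot$; (ii) $\wk^m_\Gamma[\id; p(\id)] = \wk^{m+1}_\Gamma$, and hence $q(\id_\Gamma) = \id_{\Gamma, x : S}$; (iii) $\wk^m_\Gamma[\gamma] = \wk^m_{\Gamma[\gamma]}$ for every global weakening $\gamma$, and hence $\id_\Gamma[\gamma] = \id_{\Gamma[\gamma]}$; and (iv) the lookup fact that $\id_\Gamma$ acts as the identity on variables, i.e.\ $x[\id_\Gamma] = x$ whenever $x \in \Gamma$.

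With these in hand the term induction proceeds case by case. The variable case is exactly (iv), and for $u^\delta$ we have $u^\delta[\id_\Gamma] = u^{\delta \circ \id_\Gamma}$, which the induction hypothesis for local substitutions collapses to $u^\delta$. The cases $\ze$, $\su t$ and $s\ t$ are immediate from the induction hypothesis, and $\boxit t[\id_\Gamma] = \boxit t$ holds definitionally, since a local substitution never descends under $\tbox$. The binder-like cases are where (ii) and (iii) are needed: $(\lambda x. t)[\id_\Gamma] = \lambda x.\,(t[q(\id_\Gamma)]) = \lambda x.\,(t[\id_{\Gamma, x : S}]) = \lambda x. t$ by (ii) and the induction hypothesis on the sub-derivation of $t$ in the extended local context; and for the $\tletbox$, $\Lambda g$ and application cases the local substitution is pushed under a global weakening $p(\id)$ applied to $\id_\Gamma$, so I would rewrite $\id_\Gamma[p(\id)] = \id_{\Gamma[p(\id)]}$ using (iii) and then appeal to the induction hypothesis on the body, which is typed in $\Gamma[p(\id)]$ (and under an extended global context in the first two cases). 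The local-substitution induction is parallel: for $\delta = \cdot^m$ we compute $\cdot^m \circ \id_\Gamma = \cdot^{\widehat{\id_\Gamma}}_{\widecheck{\id_\Gamma}} = \cdot^m$, using that the typing rule for $\cdot^m$ forces $|\Gamma| = m$ and $\Gamma$ to end in $\cdot$, together with (i); the cases $\cdot_g^m$ and $\wk_g^m$ go the same way, again relying on (i) to see that the exponent comes back as $m$; and the cons case $\delta, t/x$ follows from the induction hypotheses for $\delta$ and for $t$.

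The difficulty here is bookkeeping rather than any deep idea. The delicate points are that $\widehat{\id_\Gamma}$ is the \emph{length} of the ambient context $\Gamma$ --- so that composing an ``empty'' substitution $\cdot^m$, $\cdot_g^m$ or $\wk_g^m$ with $\id_\Gamma$ returns the same exponent $m$ --- and that this composition re-derives the correct base marker ($\cdot$ versus $g$); both are exactly what fact (i) pins down. In the binder cases the key is that forming the identity substitution commutes with the $q$-operation and with global weakening, i.e.\ facts (ii) and (iii). These are all short inductions on $\Gamma$, but must be carried out with the usual care about how local variables are re-indexed when a context is weakened. Once they are established, the rest is routine structural recursion, and because $\tbox$ is a leaf for local substitution the modal fragment adds nothing new.
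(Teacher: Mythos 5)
The paper gives no proof for this lemma (it is among the results deferred to the Agda mechanization), so there is nothing to compare against directly. Your proposal carries out what is almost certainly the intended argument: a mutual induction on $\ltyping i t T$ and $\ltyping i \delta \Delta$, supported by small lemmas about the family $\wk^m_\Gamma$. The four preliminary facts you isolate --- the computations of $\widehat{\delta}$ and $\widecheck{\delta}$ on $\wk^m_\Gamma$, the compatibility with $q$-extension via $\wk^m_\Gamma[\id;p(\id)] = \wk^{m+1}_\Gamma$, the commutation with global weakening, and the lookup fact --- are exactly what makes each case of the main induction close, and the case split is complete.

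Two small inaccuracies, both presentational rather than substantive. First, fact (iv) as stated ($x[\id_\Gamma] = x$) is not an independent easy induction on $\Gamma$: in the cons case $\Gamma = \Gamma', y : T'$ with $x \neq y$, the lookup unfolds to one in $\wk^1_{\Gamma'}$, not $\id_{\Gamma'}$, so the inductive hypothesis does not apply. You need either to strengthen to the full family --- $x[\wk^m_\Gamma]$ equals $x$ appropriately weakened by $m$, with (iv) as the $m = 0$ instance --- or to rewrite $\wk^1_{\Gamma'}$ as $\id_{\Gamma'}[\id;p(\id)]$ using your fact (ii) and then commute lookup with a local weakening. You do flag ``the usual care about how local variables are re-indexed'', but as written (iv) is not self-contained. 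Second, you list ``application'' alongside $\tletbox$ and $\Lambda g$ as a case where the local substitution is pushed under a global weakening $p(\id)$; neither $s\ t$ nor $t~\$~\Delta$ does this --- both merely propagate $\id_\Gamma$ structurally --- so only $\tletbox$ and $\Lambda g$ need fact (iii). Neither slip affects the soundness of the argument.
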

This lemma shows that the local identity substitution has no effect on terms and that
the right identity property of local substitutions. %

Next, we establish the global weakening lemma for typing rules:
\begin{lemma}[Global weakenings]\labeledit{lem:cvar:glob-wk-tm} $ $
  \begin{itemize}
  \item If $\ltyping i t T$ and $\gamma : \Psi' \To_g \Psi$, then
    $\ltyping[\Psi'][\Gamma[\gamma]] i {t[\gamma]}{ T[\gamma]}$.
  \item If $\ltyping i \delta \Delta$ and $\gamma : \Psi' \To_g \Psi$, then
    $\ltyping[\Psi'][\Gamma[\gamma]] i {\delta[\gamma]}{\Delta[\gamma]}$.
  \end{itemize}
\end{lemma}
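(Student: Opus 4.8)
The plan is to proceed by a single mutual induction on the two typing derivations $\ltyping i t T$ and $\ltyping i \delta \Delta$, reading the statement as universally quantified over $\Psi'$ and $\gamma$ so that the induction hypothesis can be reapplied with a modified weakening — this is essential for the binder cases, where $\gamma$ descends into a subderivation as $q(\gamma)$. Throughout, the well-formedness premises that occur in the typing rules, in particular the shaded premises $\ljudge[\Psi] 0 \Delta$ and $\ljudge[\Psi] 0 T$ of the $\letbox u s t$ rule, are transported along $\gamma$ by \Cref{lem:cvar:glob-wk-ty}; iterated weakenings are rearranged using the algebra of global weakenings; and the meta-application case is closed using the Naturality lemma. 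One auxiliary fact is also required, provable by a side induction on the derivation of $\gamma : \Psi' \To_g \Psi$: global weakening preserves membership, that is, $u : B \in \Psi$ implies $u : B[\gamma] \in \Psi'$ (with $u$ renamed as dictated by the lookup rules) and $g : \Ctx \in \Psi$ implies $g : \Ctx \in \Psi'$; the same induction also yields $\vdash \Psi$ and $\vdash \Psi'$.

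The leaf and purely structural cases are routine. For a local variable, $x[\gamma] = x$ and $x : T \in \Gamma$ gives $x : T[\gamma] \in \Gamma[\gamma]$ because global weakening commutes with the cons structure of local contexts; the $\ze$ case simply re-derives its context premise through \Cref{lem:cvar:glob-wk-ty}; and the cases for $\su t$, $\lambda x. t$, $s~t$, and $\boxit t$ unfold the definition of $-[\gamma]$, apply the induction hypothesis to the immediate subterms (using $(\Gamma, x : S)[\gamma] = \Gamma[\gamma], x : S[\gamma]$, $(S \func T)[\gamma] = S[\gamma] \func T[\gamma]$, and $(\cont[\Delta] T)[\gamma] = \cont[\Delta[\gamma]]{T[\gamma]}$), and reassemble the same rule over $\Psi'$. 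The four rules for local substitutions are handled similarly: $-[\gamma]$ leaves $\cdot^m$, $\cdot_g^m$, and $\wk_g^m$ unchanged up to renaming the context variable, global weakening preserves $|\Gamma|$ and the shape of the tail of the codomain context (so the side conditions that $\Gamma$ ends with $\cdot$ or with $g$ are preserved), and the cons case is immediate from the two induction hypotheses.

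The three cases that carry content are $\letbox u s t$, $\Lambda g. t$, and $t~\$~\Delta$. In the first two the weakening enters the body subderivation as $q(\gamma)$, so one first checks that $q(\gamma)$ is a well-formed global weakening into the extended global context. For $\Lambda g. t$ this reduces to well-formedness of $\Psi$ and $\Psi'$, and for $\letbox u s t$ to $\ljudge[\Psi] 0 \Delta$, $\ljudge[\Psi] 0 T$ together with their $\gamma$-images — precisely the shaded premises and \Cref{lem:cvar:glob-wk-ty}. Applying the induction hypothesis with $q(\gamma)$ then produces a derivation over the local context $\Gamma[p(\id)][q(\gamma)]$ and, for $\letbox u s t$, at the type $T'[p(\id)][q(\gamma)]$; since $p(\id) \circ q(\gamma) = p(\gamma) = \gamma \circ p(\id)$ by the definition of composition, the algebra of global weakenings rewrites these to $\Gamma[\gamma][p(\id)]$ and $T'[\gamma][p(\id)]$, which is exactly what the reassembled rule expects (the remaining premises $\ltyping[\Psi'][\Gamma[\gamma]] 1 {s[\gamma]}{\cont[\Delta[\gamma]]{T[\gamma]}}$ and $\ljudge[\Psi'] 1 {T'[\gamma]}$ come from the induction hypothesis and \Cref{lem:cvar:glob-wk-ty}). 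For $t~\$~\Delta$, the induction hypothesis types $t[\gamma]$ at $((g : \Ctx) \STo T)[\gamma] = (g : \Ctx) \STo T[q(\gamma)]$; applying the meta-application rule with the weakened argument $\Delta[\gamma]$, well-formed by \Cref{lem:cvar:glob-wk-ty}, gives the type $T[q(\gamma)][\id_{\Psi'}, \Delta[\gamma]/g]$, and this equals the required $(T[\id_{\Psi}, \Delta/g])[\gamma]$ by the Naturality lemma instantiated with $\Phi := \Psi'$ and context argument $\Delta$.

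I expect the main obstacle to be bookkeeping rather than any conceptual difficulty: keeping the interplay of $q(\gamma)$ and $p(\id)$ straight in the binder cases, and discharging the well-formedness side conditions of $q(\gamma)$ at each binder — which is exactly why the shaded premises were built into the $\letbox u s t$ rule. The one genuinely non-mechanical step is recognizing the result type of the meta-application case as an instance of Naturality; all remaining equalities follow from the algebra of global weakenings, \Cref{lem:cvar:glob-wk-ty}, and the membership-preservation fact.
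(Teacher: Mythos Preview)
Your proposal is correct and mirrors the paper's proof: mutual induction on the typing derivations, with the three content-bearing cases being $\letbox u s t$, $\Lambda g.\,t$, and $t~\$~\Delta$, handled respectively by pushing $q(\gamma)$ under the binder and using the identity $p(\id)\circ q(\gamma)=p(\gamma)=\gamma\circ p(\id)$, and by invoking the Naturality lemma for the meta-application result type. Your write-up is in fact more explicit than the paper's about the auxiliary membership-preservation fact and the well-formedness side-conditions on $q(\gamma)$, which the paper leaves implicit.
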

\begin{proof}
  Mutual induction on $\ltyping[\Phi] i t T$ and $\ltyping[\Phi] i \delta \Delta$. %
  We only consider a few interesting cases:
  \begin{itemize}[label=Case]
  \item
    \begin{mathpar}
      \inferrule
      {\ltyping[\Phi] 1 {s}{\cont[\Delta] T} \\ \ljudge[\Phi] 1 {T'} \\ \ltyping[\Phi, u : (\judge[\Delta] T)][\Gamma[p(\id)]] 1 {t}{T'[p(\id)]}}
      {\ltyping[\Phi] 1 {\letbox u s t} T'}
    \end{mathpar}
    \begin{align*}
      & \ltyping[\Psi][\Gamma[\gamma]] 1 {s[\gamma]}{\cont[\Delta] T[\gamma]}
        \byIH \\
      & \ltyping[\Psi, u : (\judge[\Delta[\gamma]]
        T[\gamma])][\Gamma[p(\id)][q(\gamma)]] 1 {t[q(\gamma)]}{T'[p(\id)][q(\gamma)]}
        \byIH \\
      & \ltyping[\Psi, u : (\judge[\Delta[\gamma]]
        T[\gamma])][\Gamma[p(\gamma)]] 1 {t[q(\gamma)]}{T'[p(\gamma)]}
        \tag{by computation} \\
      & \ltyping[\Psi, u : (\judge[\Delta[\gamma]]
        T[\gamma])][\Gamma[\gamma][p(\id)]] 1 {t[q(\gamma)]}{T'[\gamma][p(\id)]} \\
      & \ltyping[\Psi][\Gamma[\gamma]] 1 {\letbox u s t[\gamma]}{T'[\gamma]}
        \tag{by constructor}
    \end{align*}
    
  \item
    \begin{mathpar}
      \inferrule
      {\ltyping[\Phi, g: \Ctx][\Gamma[p(\id)]] 1 t T}
      {\ltyping[\Phi]{1}{\Lambda g. t}{(g : \Ctx) \STo T}}
    \end{mathpar}
    \begin{align*}
      & q(\gamma) : \Psi', g : \Ctx \To_g \Psi, g : \Ctx
        \tag{by typing rules} \\
      & \ltyping[\Psi', g: \Ctx][\Gamma[p(\id)][q(\gamma)]]
        1{t[q(\gamma)]}{T[q(\gamma)]}
        \byIH \\
      & \ltyping[\Psi', g: \Ctx][\Gamma[p(\id) \circ q(\gamma)]]
        1{t[q(\gamma)]}{T[q(\gamma)]}
        \tag{by algebraic law} \\
      & \ltyping[\Psi', g: \Ctx][\Gamma[p(\gamma)]]
        1{t[q(\gamma)]}{T[q(\gamma)]} \\
      & \ltyping[\Psi', g: \Ctx][\Gamma[\gamma][p(\id)]]
        1{t[q(\gamma)]}{T[q(\gamma)]} \\
      & \ltyping[\Psi'][\Gamma[\gamma]]
        1{\Lambda g. (t[q(\gamma)])}{(g : \Ctx) \STo (T[q(\gamma)])}
        \tag{by typing rule}
    \end{align*}
    
  \item
    \begin{mathpar}
      \inferrule
      {\ltyping[\Phi]{1}{t}{(g : \Ctx) \STo T} \\ \ljudge[\Phi] 0 \Delta}
      {\ltyping[\Phi]{1}{t~\$~\Delta}{T[\id_{\Phi}, \Delta/g]}}
    \end{mathpar}
    \begin{align*}
      & \ltyping[\Psi][\Gamma[\gamma]]{1}{t[\gamma]}{(g : \Ctx) \STo T[\gamma]}
        \byIH \\
      & \ljudge[\Psi] 0 {\Delta[\gamma]}
        \tag{by \Cref{lem:cvar:glob-wk-ty}} \\
      &
        \ltyping[\Psi][\Gamma[\gamma]]{1}{(t[\gamma])~\$~(\Delta[\gamma])}{T[q(\gamma)][\id_{\Psi}, \Delta[\gamma]/g]}
        \tag{by constructor} \\
      & \ltyping[\Psi][\Gamma[\gamma]]{1}{(t[\gamma])~\$~(\Delta[\gamma])}{T[\id_{\Phi},
        \Delta/g][\gamma]}
        \tag{by naturality}
    \end{align*}
    
  \end{itemize}
\end{proof}

\begin{lemma}[Global Weakening]
  If $\typing[\Psi] \sigma \Phi$ and $\gamma : \Psi' \To_g \Psi$, then $\typing[\Psi']{\sigma[\gamma]}\Phi$.
\end{lemma}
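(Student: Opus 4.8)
The plan is to proceed by induction on the derivation of $\typing[\Psi]{\sigma}{\Phi}$, following the three typing rules for global substitutions and reusing the already-established global-weakening lemmas \Cref{lem:cvar:glob-wk-ty} (for well-formedness) and \Cref{lem:cvar:glob-wk-tm} (for terms and local substitutions), together with the algebra of global substitutions. In the base case $\sigma = \cdot$ and $\Phi = \cdot$; since $\cdot[\gamma] = \cdot$, it suffices to observe $\vdash \Psi'$, which falls out of the hypothesis $\gamma : \Psi' \To_g \Psi$ by a straightforward induction on the weakening derivation (each rule forces its domain context to be well-formed).

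In the context-variable cons case, $\sigma = \sigma_0, \Gamma/g$ and $\Phi = \Phi_0, g : \Ctx$, with premises $\typing[\Psi]{\sigma_0}{\Phi_0}$ and $\ljudge[\Psi] 0 \Gamma$. Here $(\sigma_0, \Gamma/g)[\gamma] = (\sigma_0[\gamma]), (\Gamma[\gamma])/g$: the induction hypothesis gives $\typing[\Psi']{\sigma_0[\gamma]}{\Phi_0}$, \Cref{lem:cvar:glob-wk-ty} gives $\ljudge[\Psi'] 0 {\Gamma[\gamma]}$, and re-applying the typing rule closes the case. In the term cons case, $\sigma = \sigma_0, t/u$ and $\Phi = \Phi_0, u : (\judge[\Gamma]{T})$, with premises $\typing[\Psi]{\sigma_0}{\Phi_0}$, the layer-$0$ well-formedness of $\Gamma$ and $T$, and $\ltyping[\Psi][\Gamma[\sigma_0]] 0 {t}{T[\sigma_0]}$. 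The induction hypothesis gives $\typing[\Psi']{\sigma_0[\gamma]}{\Phi_0}$ and $(\sigma_0, t/u)[\gamma] = (\sigma_0[\gamma]), (t[\gamma])/u$; the well-formedness conditions on $\Gamma$ and $T$ pertain only to the unchanged codomain $\Phi_0$ (and, being at layer $0$, $T$ additionally satisfies $T[\gamma] = T$), so they carry over verbatim. It remains to retype $t[\gamma]$: applying \Cref{lem:cvar:glob-wk-tm} to the last premise gives $\ltyping[\Psi'][(\Gamma[\sigma_0])[\gamma]] 0 {t[\gamma]}{(T[\sigma_0])[\gamma]}$, and rewriting by the algebra of global substitutions, namely $(\Gamma[\sigma_0])[\gamma] = \Gamma[\sigma_0[\gamma]]$ and $(T[\sigma_0])[\gamma] = T[\sigma_0[\gamma]]$, produces exactly the premise needed to re-apply the typing rule with $\sigma_0[\gamma]$.

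I do not expect a genuine obstacle here; the statement is essentially a corollary of the machinery already in place. The one point needing care is the asymmetry between the two sides of the judgment: the codomain $\Phi$ is kept fixed, whereas the substituted term (and, for a context-variable binding, the substituted local context) lives over the weakened side $\Psi$ and so must itself be weakened, which forces a re-association $(-)[\sigma_0][\gamma] = (-)[\sigma_0[\gamma]]$ in order to match the syntactic shape the typing rules demand. That re-association is precisely what the algebra-of-global-substitutions lemma supplies, so any trouble in a mechanized check would lie only in lining up those identities, not in the structure of the induction.
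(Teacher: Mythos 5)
Your proof is correct, and since the paper states this lemma without supplying its own proof, your argument fills a genuine gap: induction on the derivation of $\typing[\Psi]{\sigma}{\Phi}$, discharging the nil case via well-formedness of the domain of the weakening, the context-variable cons via \Cref{lem:cvar:glob-wk-ty}, and the term cons via \Cref{lem:cvar:glob-wk-tm} together with the re-association $(-)[\sigma_0][\gamma] = (-)[\sigma_0[\gamma]]$. That is exactly the standard argument, and no other lemma is needed.

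One point is worth flagging because it is the only place the argument is not purely mechanical. The paper's cons rule for $\typing[\Psi]{\sigma, t/u}{\Phi, u : (\judge T)}$ literally writes its side premises as $\ljudge[\Psi] 0 \Gamma$ and $\ljudge[\Psi] 0 T$ --- with $\Psi$, not $\Phi$ --- whereas you read them as pertaining to the codomain $\Phi_0$ so that they ``carry over verbatim'' when $\Psi$ is weakened. Your reading is almost certainly the intended one: if $\Gamma$ were well-formed over $\Psi$, the other premise $\ltyping[\Psi][\Gamma[\sigma]]0{t}{T[\sigma]}$ would have $\Gamma[\sigma]$ undefined, since $\sigma : \Psi \To_g \Phi$ binds $\Phi$-side (not $\Psi$-side) variables. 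But the discrepancy is worth stating openly rather than hedging, because under the literal $\Psi$ reading the verbatim carry-over genuinely fails for $\Gamma$: the layer-$0$ no-op identity $T[\gamma] = T$ is stated only for types, and it does not extend to local contexts, whose base context variable $\gamma$ can shift. Your parenthetical about $T[\gamma] = T$ suggests you noticed the tension; in a mechanized development you would need either the corrected rule (premises over $\Phi$) or a separate argument that the unweakened $\Gamma$ remains well-formed over $\Psi'$, and it is cleaner to commit to the former.
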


\begin{lemma}[Global Weakening Substitutions]
  If $\vdash \Psi,\Phi$, then $\typing[\Psi,\Phi]{\wk^{|\Phi|}_{\Psi}}{\Psi}$. 
\end{lemma}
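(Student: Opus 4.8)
The plan is to prove the slightly more general statement that $\typing[\Psi, \Phi]{\wk^{|\Phi|}_{\Psi}}{\Psi}$ holds for \emph{every} $\Psi$ and $\Phi$ with $\vdash \Psi, \Phi$, by induction on $\Psi$ (keeping $\Phi$ arbitrary). This matches the recursive structure of $\wk^{m}_{\Psi}$, which recurses on its subscript $\Psi$ while carrying the number $m = |\Phi|$ along; in each inductive step the head binding of $\Psi$ gets peeled off and moved onto the front of the trailing block, so the induction hypothesis is applied to the strictly smaller $\Psi'$ with a trailing block one longer than $\Phi$. The base case $\Psi = \cdot$ is immediate: $\wk^{|\Phi|}_{\cdot} = \cdot$ and $\cdot, \Phi = \Phi$, so the goal $\typing[\Phi]{\cdot}{\cdot}$ is just the empty-substitution rule applied to $\vdash \Phi$, which is the hypothesis $\vdash \cdot, \Phi$.

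For $\Psi = \Psi', g : \Ctx$ I would unfold $\wk^{|\Phi|}_{\Psi', g : \Ctx} = \wk^{1 + |\Phi|}_{\Psi'}, g/g$ and apply the global-substitution rule for context-variable extension. Its first premise $\typing[{\Psi', g : \Ctx, \Phi}]{\wk^{1 + |\Phi|}_{\Psi'}}{\Psi'}$ is the induction hypothesis for $\Psi'$ with the trailing block $(g : \Ctx), \Phi$ (of length $1 + |\Phi|$), instantiated using $\vdash \Psi', g : \Ctx, \Phi$, which is precisely the hypothesis. Its second premise $\ljudge[{\Psi', g : \Ctx, \Phi}] 0 g$ holds because that global context is well formed and contains the binding $g : \Ctx$.

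The one delicate case is $\Psi = \Psi', u : (\judge[\Gamma] T)$, where $\wk^{|\Phi|}_{\Psi', u : (\judge[\Gamma] T)} = \wk^{1 + |\Phi|}_{\Psi'}, u^{\id_{\Gamma}[p^{1 + |\Phi|}(\id)]}/u$ and we apply the global-substitution rule for term extension. Its first premise is again the induction hypothesis, now with trailing block $(u : (\judge[\Gamma] T)), \Phi$. The well-formedness side conditions on $\Gamma$ and $T$ come from inverting $\vdash \Psi, \Phi$ down to $\ljudge[\Psi'] 0 \Gamma$ and $\ljudge[\Psi'] 0 T$, re-weakening to $\Psi', u : (\judge[\Gamma] T), \Phi$ via \Cref{lem:cvar:glob-wk-ty}, and using that layer-$0$ local contexts and types are unaffected by global weakening (the no-op lemma and its context analogue). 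The crux is the remaining premise, namely $\ltyping[{\Psi', u : (\judge[\Gamma] T), \Phi}][{\Gamma[\wk^{1 + |\Phi|}_{\Psi'}]}] 0 {u^{\id_{\Gamma}[p^{1 + |\Phi|}(\id)]}}{T[\wk^{1 + |\Phi|}_{\Psi'}]}$. First I would use \Cref{lem:cvar:wk-p-n-ty} to rewrite $\Gamma[\wk^{1 + |\Phi|}_{\Psi'}]$ and $T[\wk^{1 + |\Phi|}_{\Psi'}]$ as $\Gamma[p^{1 + |\Phi|}(\id)]$ and $T[p^{1 + |\Phi|}(\id)]$, reducing to a claim phrased purely with the plain global weakening $p^{1 + |\Phi|}(\id)$. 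Iterating the weakened-lookup rules for global variables shows that $u$ is bound with contextual type $\Gamma[p^{1 + |\Phi|}(\id)] \vdash T[p^{1 + |\Phi|}(\id)]$ in $\Psi', u : (\judge[\Gamma] T), \Phi$ (the exponent is $1 + |\Phi|$: one for the $u$ binding itself and $|\Phi|$ for the trailing block). So by the typing rule for $u^{\delta}$ it suffices to check $\ltyping[{\Psi', u : (\judge[\Gamma] T), \Phi}][{\Gamma[p^{1 + |\Phi|}(\id)]}] 0 {\id_{\Gamma}[p^{1 + |\Phi|}(\id)]}{\Gamma[p^{1 + |\Phi|}(\id)]}$, which follows from the corollary that $\ltyping[\Psi'][\Gamma] 0 {\id_{\Gamma}}{\Gamma}$ (valid since $\ljudge[\Psi'] 0 \Gamma$), transported along the global weakening $p^{1 + |\Phi|}(\id) : \Psi', u : (\judge[\Gamma] T), \Phi \To_g \Psi'$ (itself well formed by repeated inversion on $\vdash \Psi, \Phi$) via \Cref{lem:cvar:glob-wk-tm}.

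I expect the main obstacle to be exactly this bookkeeping in the $u$-binding case: making the weakening exponents line up --- the $1 + |\Phi|$ appearing in the assigned term $u^{\id_{\Gamma}[p^{1 + |\Phi|}(\id)]}$ must match the exponent produced by looking $u$ up past the $|\Phi|$-block --- and reconciling the two notations for global weakening used in the development, the $\wk^{n}_{\Psi}$ substitution notation occurring in the very definition of $\wk^{|\Phi|}_{\Psi}$ versus the $p^{n}(\id)$ weakening notation used by the typing and substitution operations, which is precisely what \Cref{lem:cvar:wk-p-n-ty} supplies. Everything outside this is a mechanical reassembly of the three substitution typing rules with the corresponding premises.
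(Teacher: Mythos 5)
Your proof is correct and follows essentially the same route as the paper's: induction on $\Psi$ (equivalently on $\vdash\Psi$) with $\Phi$ universally quantified, focusing on the $u:(\judge T)$ binding as the one interesting case, using the lemma converting $\wk^n_\Psi$ to $p^n(\id)$ to normalize the goal, and then discharging the well-typedness of $\id_\Gamma[p^{1+|\Phi|}(\id)]$ via the typing of $\id_\Gamma$ transported along a global weakening. You spell out the base and $g:\Ctx$ cases and the lookup-exponent bookkeeping that the paper leaves implicit, which is correct but adds no new idea.
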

\begin{proof}
  This lemma is actually requires a bit preliminaries to establish and so this is the
  earliest point where this lemma can be proven. %
  From $\vdash \Psi,\Phi$, we know $\vdash \Psi$, which we do induction on. %
  We consider only one case:
  \begin{mathpar}
    \inferrule
    {\ljudge 0 \Gamma \\ \ljudge 0 T}
    {\vdash \Psi, u : (\judge T)}
  \end{mathpar}
  \begin{align*}
    & \wk^{|\Phi|}_{\Psi, u : (\judge T)} = \wk^{1 + |\Phi|}_{\Psi},
      u^{\id_\Gamma[p^{1+|\Phi|}(\id)]}/u
      \tag{by definition} \\
    & \typing[\Psi, u : (\judge T),\Phi]{\wk^{1 + |\Phi|}_{\Psi}}{\Psi}
      \byIH 
  \end{align*}
  At last, we must prove $\ltyping[\Psi, u : (\judge
  T),\Phi][\Gamma[\wk^{1 + |\Phi|}_{\Psi}]] 0
  {u^{\id_\Gamma[p^{1+|\Phi|}(\id)]}}{T[\wk^{1 + |\Phi|}_{\Psi}]}$. %
  But we know that this goal is the same as the following due to
  \Cref{lem:cvar:wk-p-n-ty}:
  \begin{align*}
    \ltyping[\Psi, u : (\judge
    T),\Phi][\Gamma[p^{1 + |\Phi|}(\id)]] 0
    {u^{\id_\Gamma[p^{1+|\Phi|}(\id)]}}{T[p^{1 + |\Phi|}(\id)]}
  \end{align*}
  It remains to prove that the substitution is well-typed:
  \begin{align*}
    \ltyping[\Psi, u : (\judge
    T),\Phi][\Gamma[p^{1 + |\Phi|}(\id)]] 0
    {\id_\Gamma[p^{1+|\Phi|}(\id)]}{\Gamma[p^{1 + |\Phi|}(\id)]}
  \end{align*}
  This goal is immediate due to \Cref{lem:cvar:loc-id,lem:cvar:glob-wk-tm}.
\end{proof}

\begin{corollary}
  If $\vdash \Psi$, then $\typing[\Psi]{\id_{\Psi}}{\Psi}$. 
\end{corollary}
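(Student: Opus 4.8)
The plan is to obtain this corollary as the degenerate case $\Phi = \cdot$ of the immediately preceding Global Weakening Substitutions lemma. First I would observe that, as global contexts, $\Psi, \cdot$ is literally $\Psi$ (appending the empty global context is the identity on the syntax of global contexts), so the hypothesis $\vdash \Psi$ is exactly the hypothesis $\vdash \Psi, \Phi$ required to invoke that lemma with $\Phi := \cdot$. Applying the lemma then yields $\typing[\Psi, \cdot]{\wk^{|\cdot|}_{\Psi}}{\Psi}$.

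Next I would unfold the two remaining pieces of notation. Since $|\cdot| = 0$, the conclusion reads $\typing[\Psi]{\wk^{0}_{\Psi}}{\Psi}$; and by the definition $\id_{\Psi} := \wk^{0}_{\Psi}$ this is precisely $\typing[\Psi]{\id_{\Psi}}{\Psi}$, which is the desired statement. So the only genuine steps are the instantiation and the bookkeeping identification $\wk^{0}_{\Psi} = \id_{\Psi}$.

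There is essentially no obstacle here: the mathematical content lies entirely in the Global Weakening Substitutions lemma (whose proof in turn relies on \Cref{lem:cvar:wk-p-n-ty}, \Cref{lem:cvar:loc-id}, and \Cref{lem:cvar:glob-wk-tm}), and what remains is routine unfolding. If one preferred a self-contained argument, one could instead induct directly on $\vdash \Psi$, mirroring the proof of the lemma but with every $p^{1+|\Phi|}(\id)$-weakening replaced by $p^{0}(\id) = \id$ throughout; this duplicates work and yields no new insight, so the instantiation route is clearly preferable.
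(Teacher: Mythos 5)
Your argument is exactly the intended one: the corollary is the $\Phi=\cdot$ instance of the Global Weakening Substitutions lemma, and all that remains is the bookkeeping $\Psi,\cdot = \Psi$, $|\cdot| = 0$, and $\id_{\Psi} = \wk^0_{\Psi}$. The paper presents this as a corollary with no separate proof precisely because this instantiation is immediate.
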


Finally, we can establish the presupposition of terms and local substitutions:
\begin{lemma}[Presupposition]$ $
  \begin{itemize}
  \item If $\ltyping i t T$, then $\ljudge i \Gamma$ and $\ljudge i T$. 
  \item If $\ltyping i \delta \Delta$, then $\ljudge i \Gamma$ and $\ljudge i \Delta$. 
  \end{itemize}
\end{lemma}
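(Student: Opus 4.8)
The plan is a mutual induction on the derivations $\ltyping i t T$ and $\ltyping i \delta \Delta$: for each rule the context obligation $\ljudge i \Gamma$ is read off the premises (occasionally after one inversion) and the type obligation is assembled from the inductive hypotheses plus a few auxiliary facts that I would establish first. (1) Inversion on the formation rules: $\ljudge i {\Gamma, x : T}$ forces $\ljudge i \Gamma$ and $\ljudge i T$, $\ljudge i {S \func T}$ forces $\ljudge i S$ and $\ljudge i T$, and $\ljudge 1 {(g : \Ctx) \STo T}$ forces $\ljudge[\Psi, g : \Ctx] 1 T$, each because it is the unique rule with that conclusion shape. (2) A variable lemma: if $\ljudge i \Gamma$ and $x : T \in \Gamma$ then $\ljudge i T$, by a side induction on $\ljudge i \Gamma$, with no weakening intervening since types carry no local context. (3) A global-binding lemma: if $\vdash \Psi$ and $u : (\judge[\Delta] T) \in \Psi$ then $\ljudge[\Psi] 0 \Delta$ and $\ljudge[\Psi] 0 T$, by induction on $\vdash \Psi$, pushing the premises of the binding-formation rule through the global weakening built into the lookup relation with \Cref{lem:cvar:glob-wk-ty}. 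I will also freely use the already-proved presupposition lemma for contexts and types (to turn $\ljudge i \Gamma$ into $\vdash \Psi$) and the Lifting lemma.

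Granting these, almost every case is mechanical. The context obligation $\ljudge i \Gamma$ is a literal premise of the rules for $x$, $\ze$, $\su t$, $\letbox u s t$, $\Lambda g. t$, $\cdot^m$, $\cdot^m_g$ and $\wk^m_g$; it is delivered by the inductive hypothesis for $u^\delta$, $t\ s$, $\boxit t$ and $\delta, t/x$; and for $\lambda x. t$ it follows by applying (1) to the IH $\ljudge i {\Gamma, x : S}$. For the type obligation: $x$ combines (2) with context formation; $u^\delta$ uses (3) together with Lifting; $\ze$ and $\su t$ re-derive $\ljudge i \Nat$ from $\vdash \Psi$; $\lambda x. t$ rebuilds $S \func T$ from the inverted IH while $t\ s$ inverts $\ljudge i {S \func T}$ from its IH via (1); $\boxit t$ feeds the IHs $\ljudge[\Psi] 0 \Delta$ and $\ljudge[\Psi] 0 T$ into the formation rule for $\cont[\Delta] T$; $\letbox u s t$ takes $\ljudge[\Psi] 1 {T'}$ straight from its shaded premise; and $\Lambda g. t$ reassembles $(g : \Ctx) \STo T$ from the IH $\ljudge[\Psi, g : \Ctx] 1 T$. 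On the local-substitution side, $\cdot^m$ and $\cdot^m_g$ need only $\ljudge i \cdot$, $\wk^m_g$ needs $\ljudge i g$ (using $g : \Ctx \in \Psi$), all immediate from $\vdash \Psi$, and $\delta, t/x$ glues the IHs with context formation.

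The only case with genuine content is meta-function application $t~\$~\Delta$, whose conclusion type is $T[\id_\Psi, \Delta/g]$: the IH gives $\ljudge[\Psi] 1 {(g : \Ctx) \STo T}$, hence $\ljudge[\Psi, g : \Ctx] 1 T$ by (1), and the premise gives $\ljudge[\Psi] 0 \Delta$, so I would invoke the global-substitution stability of type well-formedness — if $\ljudge[\Phi] i T$ and $\typing[\Psi]{\sigma}{\Phi}$ then $\ljudge[\Psi] i {T[\sigma]}$ — at $\sigma = \id_\Psi, \Delta/g$, whose well-typedness $\typing[\Psi]{\id_\Psi, \Delta/g}{\Psi, g : \Ctx}$ comes from the corollary $\typing[\Psi]{\id_\Psi}{\Psi}$ and the global-substitution typing rule for a $\Delta/g$ extension. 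The main obstacle is thus not the induction itself but ensuring this stability lemma (and the corollary, which rests only on \Cref{lem:cvar:loc-id} and \Cref{lem:cvar:glob-wk-tm}) is available here; since the formation judgments $\ljudge i T$ and $\ljudge i \Gamma$ never mention terms, it is strictly prior to presupposition of terms, so there is no circularity and the induction closes.
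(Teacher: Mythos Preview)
Your proposal is correct and matches the paper's approach: the paper's proof is simply ``We do a mutual induction,'' and your elaboration of the auxiliary facts (inversions, variable and global-binding lookup, and the global-substitution stability lemma for type well-formedness needed in the $t~\$~\Delta$ case) is exactly what that mutual induction unpacks to. Your observation that the global-substitution lemma for $\ljudge i T$ and $\ljudge i \Gamma$ is strictly prior to presupposition of terms---since these formation judgments never mention terms---correctly dissolves the apparent circularity, and is the point the paper leaves implicit (the Agda mechanization referenced there presumably discharges it in the same way).
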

\begin{proof}
  We do a mutual induction.
\end{proof}

Next, we need a similar lemma to \Cref{lem:cvar:wk-p-n-ty} but for terms and local
substitutions:
\begin{lemma}\labeledit{lem:cvar:wk-p-n-tm} $ $
  \begin{itemize}
  \item If $\ltyping i t T$, then $t[\wk^n_{\Psi}] = t[p^n(\id)]$. 
  \item If $\ltyping i \delta \Delta$, then $\delta[\wk^n_{\Psi}] = \delta[p^n(\id)]$. 
  \end{itemize}
\end{lemma}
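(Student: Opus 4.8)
The plan is a mutual induction on the derivations of $\ltyping i t T$ and $\ltyping i \delta \Delta$, running in parallel with (and freely using) \Cref{lem:cvar:wk-p-n-ty} for every type and local context that occurs inside a term or local substitution --- in particular for the contextual domain $\Delta$ attached to a $\tletbox$. In each case I expand the global‑substitution action of $\wk^n_\Psi$ on the left‑hand side and the global‑weakening action of $p^n(\id)$ on the right‑hand side, push the induction hypotheses through the immediate subterms and subsubstitutions, and close the case with the algebra lemmas for global weakenings and for global and local substitutions proved earlier in this section. The numeral cases ($\ze$, $\su t$), the function cases ($\lambda x.\,t$ and $s\,t$), the local‑variable case, the $\tbox$ case, and the $\cdot^m$, $\wk_g^m$, and cons cases of local substitutions are all immediate this way: on a $\lambda$- or $\tbox$-binder both the substitution and the weakening are passed through unchanged, and a global action never touches the numeric annotation of a base local substitution.

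The two cases that extend the global context --- $\Lambda g.\,t$ and $\tletbox$ --- additionally require commuting $\wk^n_\Psi$ past the binder. The point is that the recursive clause defining $\wk^n$ is deliberately set up to coincide with the $q$-lift that the substitution action performs: unfolding $\wk^n_{\Psi, g:\Ctx} = \wk^{1+n}_\Psi, g/g$ and $\wk^n_{\Psi, u:(\judge[\Delta]T)} = \wk^{1+n}_\Psi, u^{\id_\Delta[p^{1+n}(\id)]}/u$, and using (i) the small auxiliary identity $\wk^n_\Psi[p(\id)] = \wk^{n+1}_\Psi$ (a routine induction on $\Psi$, from $p^{k}(\id) \circ p(\id) = p^{k+1}(\id)$), (ii) $\Delta[\wk^n_\Psi] = \Delta[p^n(\id)]$ by \Cref{lem:cvar:wk-p-n-ty}, and (iii) the fact that $\id_\Delta$ contains no global variables and is therefore invariant under any global weakening, one checks that the lifted substitution appearing in $\Lambda g.\,t[\wk^n_\Psi]$ (resp.\ $\letbox u s t[\wk^n_\Psi]$) is exactly the $q$-lift of $\wk^n_\Psi$, which matches the $q$-lift of $p^n(\id)$ on the weakening side; the body then reduces to the induction hypothesis. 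A clean way to organize this --- and to sidestep the minor bookkeeping about where the fresh slots sit in the global context --- is to prove the statement for an arbitrary global weakening $\gamma : \Psi' \To_g \Psi$, with $\wk^n_\Psi$ replaced by the ``identity‑filled'' substitution determined by $\gamma$, defined so that it is closed under $q$-lifting by construction; the stated lemma is then the instance $\gamma = p^n(\id)$.

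The genuinely delicate case is the global variable $\ltyping i {u^\delta} T$, coming from $\ltyping i \delta \Delta$ and $u : (\judge[\Delta] T) \in \Psi$. Here $u^\delta[\wk^n_\Psi] = (\wk^n_\Psi)(u)[\delta[\wk^n_\Psi]]$, and inspecting the definition of $\wk^n_\Psi$ shows that looking up $u$ returns $u^{\id_\Delta[p^{k}(\id)]}$ for the appropriate shift $k$ --- which, since $\id_\Delta$ has no global variables, equals $u^{\id_\Delta}$ (modulo the name‑representation shift of $u$ itself, which is exactly the shift that $p^n(\id)$ applies to $u$ on the right‑hand side). Hence $u^\delta[\wk^n_\Psi] = u^{\id_\Delta \,\circ\, \delta[\wk^n_\Psi]}$; applying the induction hypothesis $\delta[\wk^n_\Psi] = \delta[p^n(\id)]$ and then the left‑identity law $\id_\Delta \circ \epsilon = \epsilon$ for local‑substitution composition (provable by induction on $\Delta$ using that a weakening substitution skips its leading conses, cf.\ \Cref{lem:cvar:lsubst-skip}, together with \Cref{lem:cvar:loc-id}) collapses this to $u^{\delta[p^n(\id)]} = u^\delta[p^n(\id)]$. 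I expect the interaction between the lookup in $\wk^n_\Psi$, the global‑weakening invariance of $\id_\Delta$, and this left‑identity law to be the one step that needs care; the remaining cases are bookkeeping against the earlier algebra lemmas.
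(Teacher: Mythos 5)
Your plan --- mutual induction, with \Cref{lem:cvar:wk-p-n-ty} running alongside, the algebra lemmas closing the bookkeeping cases, and a generalization from $p^n(\id)$ to an arbitrary global weakening in order to push past the global binders in $\Lambda g.\,t$ and $\tletbox$ --- matches exactly the ``generalization to handle extensions of global contexts'' that the paper alludes to and delegates to the mechanization, so you are on the same track.

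There is one concrete misstep in your analysis of the global-variable case: you assert that $\id_\Delta$ ``contains no global variables and is therefore invariant under any global weakening.'' That is false in this setting. If $\Delta$ begins with a context variable $g$, then $\id_\Delta = \wk^{0}_{\Delta}$ unfolds to a substitution whose base is $\wk_g^{m}$ (or $\cdot_g^m$), and the action of a global weakening on $\wk_g^m$ explicitly rewrites $g$ (``with $g$ properly weakened''). So $\id_\Delta[\gamma] \neq \id_\Delta$ in general. The fact you actually need --- and which is provable by the same induction on $\Delta$ you gesture at --- is $\id_\Delta[\gamma] = \id_{\Delta[\gamma]}$. Happily, this is the fact that makes the calculation come out right: the lookup of $u$ in $\wk^n_\Psi$ produces $u^{\id_{\Delta'}[p^{k}(\id)]}$ where $\Delta'$ is the local context recorded at the binding site (before the lookup shift), and $\Delta'[p^{k}(\id)]$ equals $\Delta[p^{n}(\id)]$, i.e.\ precisely the codomain of $\delta[p^n(\id)]$ obtained from the induction hypothesis; the left-identity law $\id_{\Delta[p^n(\id)]}\circ \delta[p^n(\id)] = \delta[p^n(\id)]$ then closes the case. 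So the argument survives once the incorrect ``invariance'' claim is replaced by the substitution identity $\id_\Delta[\gamma] = \id_{\Delta[\gamma]}$, together with a careful match of the exponent $k$ (the lookup shift) against $n$ plus the intervening $p(\id)$ shifts. Everything else --- including the auxiliary $\wk^n_\Psi[p(\id)] = \wk^{n+1}_\Psi$ and the reduction of the left identity to \Cref{lem:cvar:lsubst-skip} and \Cref{lem:cvar:loc-id} --- looks sound.
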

\begin{proof}
  The proof of this lemma requires an intrigued generalization in order to handle
  extensions of global contexts due to $\tletbox$ and $\Lambda$. %
  Details of the generalization are technical and too elaborate to put in this
  technical report, and thus we choose to leave them in the Agda mechanization for
  readers' reference. 
\end{proof}

Next, we should verify the identity rules of composition of global substitutions. %
Notice that by applying \Cref{lem:cvar:wk-p-n-ty,lem:cvar:wk-p-n-tm}, we obtain the
right identity immediately:
\begin{lemma}
  If $\typing[\Psi] \sigma \Phi$, then $\sigma \circ \id_{\Psi} = \sigma$. 
\end{lemma}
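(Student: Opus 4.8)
The plan is to argue by induction on the derivation of $\typing[\Psi]{\sigma}{\Phi}$, i.e.\ on the structure of the global substitution $\sigma$, inspecting the three typing rules for global substitutions. In the base case $\sigma = \cdot$ (so $\Phi = \cdot$), the first defining clause of composition gives $\cdot \circ \id_{\Psi} = \cdot$, which is exactly $\sigma$, so there is nothing to do.

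For the term-cons case $\sigma = \sigma', t/u$ we have sub-derivations $\typing[\Psi]{\sigma'}{\Phi'}$ and $\ltyping[\Psi][\Gamma[\sigma']]{0}{t}{T[\sigma']}$ for the relevant $\Gamma$ and $T$. Unfolding composition yields $(\sigma', t/u) \circ \id_{\Psi} = (\sigma' \circ \id_{\Psi}), t[\id_{\Psi}]/u$; the induction hypothesis rewrites the first component to $\sigma'$, so it remains to show $t[\id_{\Psi}] = t$. Here I would unfold $\id_{\Psi} = \wk^0_{\Psi}$, apply \Cref{lem:cvar:wk-p-n-tm} — whose hypothesis is discharged by the typing of $t$ — to get $t[\wk^0_{\Psi}] = t[p^0(\id)] = t[\id]$, and conclude by the algebra-of-global-weakenings lemma for terms, which gives $t[\id] = t$. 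The context-variable-cons case $\sigma = \sigma', \Gamma/g$ is entirely parallel: the induction hypothesis handles $\sigma' \circ \id_{\Psi}$, and for the new component we use \Cref{lem:cvar:wk-p-n-ty} together with the algebra-of-global-weakenings lemma for local contexts to obtain $\Gamma[\id_{\Psi}] = \Gamma[\wk^0_{\Psi}] = \Gamma[p^0(\id)] = \Gamma[\id] = \Gamma$.

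I do not expect a genuine obstacle here: the real content — that the weakening substitutions $\wk^n_{\Psi}$ collapse to the purely syntactic global weakenings $p^n(\id)$ on well-formed syntax — has already been isolated as \Cref{lem:cvar:wk-p-n-ty,lem:cvar:wk-p-n-tm}, which is exactly why this statement is placed immediately after them. The only thing needing care is that, since $\id_{\Psi}$ is assembled level-by-level out of weakening substitutions, each appeal to those two lemmas must be justified by the well-formedness and typing premises that the global-substitution typing rules carry at every node of $\sigma$; with those premises in hand the rewriting $\wk^0_{\Psi} = \id$ goes through uniformly, and the claim follows.
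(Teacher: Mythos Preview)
Your proposal is correct and matches the paper's approach exactly: the paper simply says the result follows immediately from \Cref{lem:cvar:wk-p-n-ty,lem:cvar:wk-p-n-tm}, and your induction on $\sigma$ is precisely the way to make that ``immediately'' precise. The key observation in both is that $\id_\Psi = \wk^0_\Psi$ acts as $p^0(\id) = \id$ on each well-typed component, which those two lemmas supply.
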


The left identity, on the other hand, requires certain generalization which
must incorporate global weakenings. %
We again leave the details in the Agda mechanization:
\begin{lemma}
  If $\typing[\Psi] \sigma \Phi$, then $\id_{\Phi} \circ \sigma = \sigma$. 
\end{lemma}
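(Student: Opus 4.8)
The plan is to prove a strengthened statement by induction on $\Phi$ (equivalently, on the derivation of $\typing[\Psi]{\sigma}{\Phi}$). Direct induction on the stated lemma fails: computing $\id_{\Phi_0, B} \circ \sigma$ requires unfolding $\id_{\Phi_0, B} = \wk^0_{\Phi_0, B}$, and the definition of $\wk^0_{\Phi_0, B}$ mentions $\wk^1_{\Phi_0}$ rather than $\wk^0_{\Phi_0} = \id_{\Phi_0}$, so an induction hypothesis about $\id_{\Phi_0}$ is not applicable. The fix is to generalize over the superscript: I would prove that for every $m \ge 0$, if $\typing[\Psi]{\sigma}{\Phi, \Delta}$ with $|\Delta| = m$, and $\sigma_0$ denotes the prefix of $\sigma$ of length $|\Phi|$ (which satisfies $\typing[\Psi]{\sigma_0}{\Phi}$ by iterated inversion on the typing rules for global substitutions), then $\wk^m_{\Phi} \circ \sigma = \sigma_0$. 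The case $m = 0$, where $\Delta = \cdot$ and $\sigma_0 = \sigma$, is precisely $\id_{\Phi} \circ \sigma = \sigma$.

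I would then prove the generalized statement by induction on $\Phi$, keeping $m$ universally quantified so it is free to grow. The case $\Phi = \cdot$ is immediate, since $\wk^m_\cdot = \cdot$, $\sigma_0 = \cdot$, and $\cdot \circ \sigma = \cdot$. If $\Phi = \Phi', g : \Ctx$, then $\wk^m_{\Phi', g:\Ctx} = \wk^{1+m}_{\Phi'}, g/g$, so by the definition of composition $(\wk^{1+m}_{\Phi'}, g/g) \circ \sigma = (\wk^{1+m}_{\Phi'} \circ \sigma),\ g[\sigma]/g$; the first component is the $\Phi'$-prefix of $\sigma$ by the induction hypothesis (applied with $(g:\Ctx), \Delta$ as the extra part, i.e. with $1+m$ in place of $m$), and $g[\sigma] = \sigma(g)$ is exactly the local context that $\sigma_0$ assigns to $g$, so the two sides agree. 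If $\Phi = \Phi', u : (\judge[\Gamma]{T})$, the $\wk^{1+m}_{\Phi'}$ component is handled as before, and it remains to check that the $u$-component $u^{\id_\Gamma[p^{1+m}(\id)]}[\sigma]$ equals $\sigma(u)$. Unfolding the action of a global substitution on a global variable, this is $\sigma(u)\bigl[\,\id_\Gamma[p^{1+m}(\id)][\sigma]\,\bigr]$, so it suffices to show that the inner local substitution collapses to the local identity on the context in which $\sigma(u)$ is typed.

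That collapse is the main obstacle — exactly the ``generalization incorporating global weakenings'' the statement hints at — and I would discharge it in three steps. First, using $p^m(\id) \circ p(\id) = p^{1+m}(\id)$ together with the algebra of global weakenings and $1+m$ applications of \Cref{lem:cvar:gsubst-skip} (with $n = 0$, which peels one binding at a time), reduce $\id_\Gamma[p^{1+m}(\id)][\sigma]$ to $\id_\Gamma[\sigma']$, where $\sigma'$ is the prefix of $\sigma$ matching the part of $\Phi$ before the $u$-binding — which is also the context in which the global-substitution typing rule requires $\sigma(u)$ to be typed. Second, establish an auxiliary lemma $\id_\Gamma[\rho] = \id_{\Gamma[\rho]}$ for (well-typed) global substitutions $\rho$, proved by structural induction on $\Gamma$; the $\Gamma, x : T$ case appeals to the slightly more general $\wk^{k}_\Gamma[\rho] = \wk^{k}_{\Gamma[\rho]}$ (again by induction on $\Gamma$ with $k$ free), and the $g$ case to the fact that local identity substitutions are stable under local $p$-weakening, $\wk^{k}_\Delta[\id; p^{j}(\id)] = \wk^{k+j}_\Delta$. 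Third, apply \Cref{lem:cvar:loc-id} to the well-typed term $\sigma(u)$ to get $\sigma(u)[\id_{\Gamma[\sigma']}] = \sigma(u)$. Assembling these with the induction hypothesis yields $\wk^m_\Phi \circ \sigma = \sigma_0$, and the desired left identity law is the instance $m = 0$. The remaining labor — keeping the various $\cdot^k$ and $\wk_g^k$ counters synchronized with context lengths across all the rewrites — is routine but delicate, so, like the authors, I would expect to certify it in the Agda mechanization rather than on paper.
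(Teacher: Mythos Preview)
Your proposal is correct and follows essentially the same approach the paper hints at: the paper says only that left identity ``requires certain generalization which must incorporate global weakenings'' and defers to the Agda mechanization, and your generalization $\wk^m_{\Phi} \circ \sigma = \sigma_0$ (the $|\Phi|$-prefix of $\sigma$), proved by induction on $\Phi$ with $m$ free, is exactly that. Your auxiliary fact $\wk^k_\Gamma[\rho] = \wk^k_{\Gamma[\rho]}$ is the analogue of what the paper later states as \Cref{lem:dt:wkact} in the dependent setting, and together with \Cref{lem:cvar:gsubst-skip} and \Cref{lem:cvar:loc-id} it discharges the $u$-component as you describe.
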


Another useful equation is that global substitutions and local weakenings commute:
\begin{lemma}[Commutativity of Global Substitutions and Local Weakenings] $ $
  \begin{itemize}
  \item If $\ltyping i t T$ and $\tau : \Psi ; \Delta \To_i \Gamma$, then
    $t[\tau][\sigma] = t[\sigma][\tau]$.
  \item If $\ltyping i \delta {\Gamma'}$ and $\tau : \Psi ; \Delta \To_i \Gamma$, then
    $\delta[\tau][\sigma] = \delta[\sigma][\tau]$.
  \end{itemize}
\end{lemma}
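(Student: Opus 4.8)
The plan is to prove both statements simultaneously by mutual induction on the typing derivations $\ltyping i t T$ and $\ltyping i \delta {\Gamma'}$, keeping $\tau$ and $\sigma$ universally quantified so that the induction hypothesis is available at the \emph{lifted} weakenings and substitutions that appear under binders. Throughout I rely on the fact that local weakenings are insensitive to the global context (the earlier lemma that global weakenings do not really affect local weakenings), so $\tau$ may be re-typed under the codomain $\Psi'$ of $\sigma$; this is what makes both $t[\tau][\sigma]$ and $t[\sigma][\tau]$ land in the same dual-context $\Psi' ; \Delta[\sigma]$, and likewise for $\delta$.

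The purely congruential cases need only the induction hypothesis on immediate subterms together with the algebra lemmas already established in this section. For $x$ both sides reduce to the single weakened variable, because global substitution is the identity on variables; for $\ze$, $\su t$, $s\ t$ and $\delta, t/x$ one pushes the two operations through the head constructor and appeals to the IH; for $\boxit t$ both sides are literally $\boxit{(t[\sigma])}$, since a local weakening is a no-op on a box, so nothing has to be commuted; and for $t ~\$~ \Delta'$ the local weakening never touches the context argument, so the equation follows from the IH on $t$ together with the fact that $[\sigma]$ acts on $\Delta'$ the same way on both sides.

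The binder cases $\lambda x. t$, $\letbox u s t$ and $\Lambda g. t$ are handled by unfolding the definitions of the two operations and invoking the IH on the body at the appropriate lifted data: on the body of $\lambda x. t$ with weakening $q(\tau)$ and substitution $\sigma$; on the body of $\letbox u s t$ with weakening $\tau$ and substitution $q(\sigma) = \sigma[p(\id)], u^\id/u$, the head $s$ being handled by the IH at $\tau, \sigma$; and on the body of $\Lambda g. t$ with weakening $\tau$ and substitution $q(\sigma) = \sigma[p(\id)], g/g$. Reassembling each result uses only the conventions relating $q(-)$ to the underlying operations and the algebra of local/global weakenings and substitutions. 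The global-variable case $u^\delta$ is of the same flavour: $u^\delta[\tau][\sigma]$ unfolds to $\sigma(u)[\delta[\tau][\sigma]]$, while $u^\delta[\sigma][\tau]$ unfolds to $\sigma(u)[\delta[\sigma]][\tau]$, which equals $\sigma(u)[\delta[\sigma][\tau]]$ by the clause $t[\delta][\tau] = t[\delta[\tau]]$ of the algebra of local weakenings; the two are then identified by the induction hypothesis for the local substitution $\delta$.

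The main obstacle is the single genuinely non-congruential case, the weakening base $\wk_g^m$ of a local substitution. For $\cdot^m$, $\cdot_g^m$, and for all the term constructors above, applying $[\sigma]$ either leaves the piece alone or only rewrites a numeric counter, so commuting with $[\tau]$ (which merely increments that counter by the number $m'$ of $p$-constructors in $\tau$) is plain arithmetic. But $\wk_g^m[\sigma] = \id_{\sigma(g)}[\id; p^m(\id)]$ \emph{expands} the context variable into a concrete weakening substitution on the instantiated context, so here one must show that ``expand $g$ to $\sigma(g)$ and shift by $m$'' commutes with ``weaken by $\tau$'', i.e. that $\id_{\sigma(g)}[\id; p^m(\id)][\tau] = \id_{\sigma(g)}[\id; p^{m + m'}(\id)]$. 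This reduces to an auxiliary fact about weakening substitutions, namely $\wk^m_{\Gamma}[\tau] = \wk^{m + m'}_{\Gamma}$ for a suitably typed local weakening $\tau$ with $m'$ many $p$'s --- intuitively true because $\wk^m_{\Gamma}$ uses only the bottom segment $\Gamma$ of its domain, so weakening it by $\tau$ shifts all of its codomain entries uniformly by $m'$. I expect to establish this auxiliary fact by an induction on $\Gamma$, splitting on the shape of $\tau$ at the top, using the composition laws for local weakenings, after which the $\wk_g^m$ case closes by counter arithmetic. As with the companion equations elsewhere in this section, the fully spelled-out de Bruijn bookkeeping is routine but tedious and is deferred to the Agda mechanization.
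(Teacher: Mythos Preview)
Your proposal is correct and follows the expected approach. The paper does not spell out a proof for this lemma --- it is one of the results in this section whose proof is left to the Agda mechanization --- but the standard argument is exactly the mutual induction on the typing derivations you describe, with the only genuinely non-trivial case being $\wk_g^m$, where the global substitution expands the context variable into a concrete local weakening substitution; your reduction of that case to the auxiliary fact $\wk^m_{\Gamma}[\tau] = \wk^{m+m'}_{\Gamma}$ (and the arithmetic on the counters for $\cdot^m$ and $\cdot_g^m$) is the right move.
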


Next, we move on to the global substitution lemma for terms and local substitutions. %
Prior to that, we must first show the local weakening lemma and local substitution lemma:
\begin{lemma}[Local Weakenings] $ $
  \begin{itemize}
  \item If $\ltyping i t T$ and $\tau : \Psi; \Delta \To_i \Gamma$, then
    $\ltyping[\Psi][\Delta] i{t[\tau]} T$. 
  \item If $\ltyping i \delta {\Delta'}$ and $\tau : \Psi; \Delta \To_i \Gamma$, then
    $\ltyping[\Psi][\Delta] i{\delta[\tau]}{\Delta'}$. 
  \end{itemize}
\end{lemma}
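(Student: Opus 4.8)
The plan is to prove both clauses simultaneously by mutual induction on the typing derivations $\ltyping{i}{t}{T}$ and $\ltyping{i}{\delta}{\Delta'}$, with the local weakening $\tau : \Psi; \Delta \To_i \Gamma$ held fixed except that it is refined to $q(\tau)$ when we pass under a $\lambda$. Before starting I would record three auxiliary facts about $\tau$, each a direct induction on its derivation: (i) a presupposition for local weakenings, i.e. $\tau : \Psi; \Delta \To_i \Gamma$ implies $\ljudge[\Psi]{i}{\Delta}$ and $\ljudge[\Psi]{i}{\Gamma}$; (ii) a bookkeeping fact, i.e. $\Delta$ and $\Gamma$ share the same base (both $\cdot$, or both the same context variable $g$) and $|\Delta| = |\Gamma| + m'$ where $m'$ is the number of $p$-constructors in $\tau$; and (iii) variable transport, i.e. $x : T \in \Gamma$ implies $x[\tau]$ is again a variable with $x[\tau] : T \in \Delta$. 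I would also reuse the earlier lemma that a local weakening is stable under global weakening, which turns $\tau : \Psi; \Delta \To_i \Gamma$ into $\tau : \Psi'; \Delta[\gamma] \To_i \Gamma[\gamma]$ for any $\gamma : \Psi' \To_g \Psi$.

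With these in hand the routine cases are mechanical. The variable case is exactly (iii). For $u^\delta$, the equation $u^\delta[\tau] = u^{\delta[\tau]}$, the mutual hypothesis on $\delta$ (retyping $\delta[\tau]$ against $\Delta$), and the global-variable rule close the goal. For $\ze$, $\su t$, application, and the cons $\delta, t/x$, local weakening commutes with the former and the hypotheses finish. For $\lambda x. t$ we have $(\lambda x. t)[\tau] = \lambda x.(t[q(\tau)])$; presupposition on the premise gives $\ljudge[\Psi]{i}{S}$, so the $q$-rule for weakenings yields $q(\tau) : \Psi; \Delta, x : S \To_i \Gamma, x : S$ and the hypothesis on $t$ concludes. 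For $\boxit t$ we have $(\boxit t)[\tau] = \boxit t$ and the only new obligation of the box rule is $\ljudge[\Psi]{1}{\Delta}$, supplied by (i). The base local-substitution formers $\cdot^m$, $\cdot_g^m$, $\wk_g^m$ are exactly where (ii) is needed: applying $\tau$ bumps the superscript from $m$ to $m + m'$, and (ii) certifies both that $|\Delta| = m + m'$ and that $\Delta$ ends with the same base as $\Gamma$, so the matching typing rule re-applies verbatim (again using $\ljudge[\Psi]{i}{\Delta}$).

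The two delicate cases are $\letbox u s t$ and $\Lambda g. t$, where the body is typed in a \emph{globally} extended context while $\tau$ was introduced over $\Psi$. For $\letbox u s t$, $(\letbox u s t)[\tau] = \letbox u {s[\tau]} {(t[\tau])}$; the hypothesis on $s$ handles the scrutinee, and for the body, typed in $\Psi, u : (\judge[\Delta_0]{T}); \Gamma[p(\id)]$, I would first transport $\tau$ along $p(\id) : \Psi, u : (\judge[\Delta_0]{T}) \To_g \Psi$ to get $\tau : \Psi, u : (\judge[\Delta_0]{T}); \Delta[p(\id)] \To_1 \Gamma[p(\id)]$, apply the hypothesis, and re-assemble with the $\tletbox$ rule, using $\ljudge[\Psi]{1}{\Delta}$ from (i) and carrying the side conditions $\ljudge[\Psi]{0}{\Delta_0}$, $\ljudge[\Psi]{0}{T}$, $\ljudge[\Psi]{1}{T'}$ through unchanged; no further rewriting is needed because the resulting judgment matches the rule's body premise on the nose. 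The $\Lambda g. t$ case is identical with $p(\id) : \Psi, g : \Ctx \To_g \Psi$, and $t~\$~\Delta_0$ needs nothing special since $(t~\$~\Delta_0)[\tau] = (t[\tau])~\$~\Delta_0$ and the meta-application rule re-applies after the hypothesis on $t$.

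The main obstacle is not a deep step but the bookkeeping concentrated in the $\tletbox$/$\Lambda$ cases: one has to combine the algebraic laws for weakening composition (as used in \Cref{lem:cvar:glob-wk-tm}), stability of local weakenings under global weakening, and the observation that the $p(\id)$ occurring in those premises commutes with the other operations. A naive induction that skips the auxiliaries (i)--(iii) stalls precisely on the base substitution formers (no handle on the length or base of $\Delta$) and on moving $\tau$ under $\tletbox$ and $\Lambda$ (no handle on the global extension), so establishing those three lemmas first is the crux of making the whole induction go through smoothly.
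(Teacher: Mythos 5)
Your proof is correct and takes the standard approach one would expect here; the paper in fact omits the proof of this lemma entirely, so there is nothing to compare against beyond noting that your treatment is complete. The three auxiliary facts you isolate are exactly the right ones to discharge the otherwise troublesome cases: (i) handles the side conditions $\ljudge[\Psi]{i}{\Delta}$ that surface in the $\tbox$ and base-substitution rules, (ii) is precisely what the typing rules for $\cdot^m$, $\cdot^m_g$, $\wk^m_g$ need (the superscript bumps to $m+m'$ under $\tau$, and you must know $|\Delta| = m+m'$ and that $\Delta$'s base agrees with $\Gamma$'s), and (iii) is the variable case. Your handling of $\tletbox$ and $\Lambda$ — transporting $\tau$ across $p(\id)$ using the paper's lemma that local weakenings are stable under global weakenings, then applying the inductive hypothesis against $\Delta[p(\id)]$ — is the correct way to cross the global-context extension, and the closing observation that no further rewriting is needed because $\tau$ is syntactically unchanged under $p(\id)$ is the key simplification that makes these cases clean. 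The $\lambda$ case correctly threads $q(\tau)$ through using presupposition for the domain type $S$ (which is available, since presupposition for typing is established earlier in the section).
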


\begin{lemma}[Local Substitutions] $ $
  \begin{itemize}
  \item If $\ltyping i t T$ and $\ltyping[\Psi][\Delta] i \delta \Gamma$, then
    $\ltyping[\Psi][\Delta] i{t[\delta]} T$. 
  \item If $\ltyping i \delta \Delta$ and $\ltyping[\Psi][\Gamma'] i {\delta'} \Gamma$, then
    $\ltyping[\Psi][\Gamma'] i{\delta \circ \delta'} {\Delta}$. 
  \end{itemize}
\end{lemma}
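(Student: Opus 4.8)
The plan is to prove both parts simultaneously by mutual induction on the derivations $\ltyping i t T$ and $\ltyping i \delta \Delta$, keeping $\Psi$, $\Gamma$, $\Delta$, $T$ and the substituted-in local substitution all universally quantified so that the induction hypothesis stays usable at extended global contexts. The arithmetic and function cases ($\ze$, $\su t$, $s\ t$, and $t~\$~\Delta'$) are pure congruences: push the substitution inward, invoke the induction hypothesis on the sub-derivations, and reassemble with the same rule; for $t~\$~\Delta'$ the result type $T[\id_{\Psi}, \Delta'/g]$ depends only on $\Psi$ and $\Delta'$ and is therefore untouched by a \emph{local} substitution. For a variable, $x[\delta] = \delta(x)$, I would first record a small auxiliary fact: from $\ltyping[\Psi][\Delta] i \delta \Gamma$ and $x : T \in \Gamma$, a side induction on the typing of $\delta$ --- which must end with a cons since $\Gamma$ ends with $x : T$ --- yields $\ltyping[\Psi][\Delta] i {\delta(x)} T$, using crucially that local types are well-formed over $\Psi$ alone and so carry no local-context dependency. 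For $u^{\delta'}$, since $u^{\delta'}[\delta] = u^{\delta' \circ \delta}$ and $u : (\judge[\Delta'] T) \in \Psi$, the composition half of the induction hypothesis applied to the sub-derivation $\ltyping i {\delta'}{\Delta'}$ gives $\ltyping[\Psi][\Delta] i {\delta' \circ \delta}{\Delta'}$, and the global-variable rule closes the case. For $\boxit t$ the operational action is the identity, and we merely swap the premise $\ljudge i \Gamma$ for $\ljudge i \Delta$, which the Presupposition lemma for local substitutions supplies.

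The real work is in the binders. For $\lambda x. t$ we have $\lambda x. t[\delta] = \lambda x.(t[q(\delta)])$ with $q(\delta) = \delta[\id; p(\id)], x/x$, so it suffices to check that $q(\delta)$ is a well-typed substitution $\ltyping[\Psi][\Delta, x : S] i {q(\delta)}{\Gamma, x : S}$; this follows from the immediately preceding Local Weakenings lemma instantiated with $p(\id) : \Psi; \Delta, x : S \To_i \Delta$, together with the variable rule for $x$, after which the induction hypothesis on the body finishes. For $\letbox u s t$ and $\Lambda g. t$ the operational definition feeds the body the \emph{globally} weakened substitution $\delta[p(\id)]$, since the body now lives in the extended global context $\Psi, u : (\judge[\Delta'] T)$ (respectively $\Psi, g : \Ctx$). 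The key point is that the shaded premises $\ljudge[\Psi] 0 \Delta'$ and $\ljudge[\Psi] 0 T$ are exactly what is needed for $p(\id) : \Psi, u : (\judge[\Delta'] T) \To_g \Psi$ to be a well-formed global weakening, so \Cref{lem:cvar:glob-wk-tm} gives $\ltyping[\Psi, u : (\judge[\Delta'] T)][\Delta[p(\id)]] 1 {\delta[p(\id)]}{\Gamma[p(\id)]}$; feeding this and the body sub-derivation to the induction hypothesis yields $t[\delta[p(\id)]]$ at type $T'[p(\id)]$ (respectively $T$), and since the remaining premises ($s[\delta]$ by the induction hypothesis, and the well-formedness premises, all over $\Psi$) survive unchanged we can reassemble the result with the original rule. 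The $\Lambda g.t$ case is entirely analogous, with result type $(g : \Ctx) \STo T$ left unchanged.

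For the composition statement I would case-split on the derivation of $\ltyping i \delta \Delta$. The cons case $(\delta'', t/x) \circ \delta' = (\delta'' \circ \delta'), t[\delta']/x$ is immediate from the composition induction hypothesis on $\delta''$ and the term induction hypothesis on $t$, followed by the cons rule. The three base cases $\cdot^m$, $\cdot_g^m$, $\wk_g^m$ are where the $\widehat{(\cdot)}$ and $\widecheck{(\cdot)}$ bookkeeping matters; here I would first establish two routine facts by induction on the typing of an arbitrary $\ltyping[\Psi][\Gamma'] i {\delta'}{\Gamma}$: that $\widehat{\delta'} = |\Gamma'|$, and that $\widecheck{\delta'}$ names exactly the context variable at the base of $\Gamma'$ (and is absent when $\Gamma'$ ends in $\cdot$). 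Granting these, each composite --- $\cdot_{\widecheck{\delta'}}^{\widehat{\delta'}}$, $\cdot_g^{\widehat{\delta'}}$, $\wk_g^{\widehat{\delta'}}$ --- matches one of the three base typing rules with ambient context $\Gamma'$, the side condition $g : \Ctx \in \Psi$ being recovered from well-formedness of $\Gamma'$ via the Presupposition lemma.

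I expect the binder cases $\letbox$ and $\Lambda$ to be the main obstacle: the global-weakening bookkeeping must be threaded so that the shaded well-formedness premises can feed \Cref{lem:cvar:glob-wk-tm}, and one must keep distinct the operational $\delta[p(\id)]$ --- a global weakening of a \emph{local} substitution --- from a $q$ of a \emph{global} substitution. Once the right lemma is invoked, the surrounding rewrites (the algebra of global and local weakenings, commutation of global substitutions with local weakenings) are all already available and the computations are mechanical; the composition base cases likewise reduce entirely to the two short measure inductions mentioned above.
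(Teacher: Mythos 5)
Your proposal takes the same route the paper does — a mutual induction, with the difficult cases being the modal binders (which need the previously established Local Weakenings and Global Weakenings lemmas for the $q(\delta)$ and $\delta[p(\id)]$ shifts, respectively) and the base cases of composition (where the auxiliary facts $\widehat{\delta'} = |\Gamma'|$ and that $\widecheck{\delta'}$ reflects the base of $\Gamma'$ are precisely what is needed to discharge the side conditions). The paper's own proof text is essentially a three‑sentence pointer to the Agda development that singles out the $\cdot^m$ base case as the cumbersome part; your reconstruction fills in all the missing cases with the right lemma invocations and is correct.
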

\begin{proof}
  We do a mutual induction. %
  Notice that in this lemma, it is somewhat more cumbersome to establish the proof for
  local substitutions. %
  When $\delta = \cdot^m$, then we must reason about the properties of
  $\widecheck{\delta'}$. %
  If $\widecheck{\delta'} = g$ for some $g$, then we must show that both $\Gamma$ and
  $\Gamma'$ start with this $g$. %
  The details are given in the Agda mechanization. 
\end{proof}

Finally, we give the global substitution lemma:
\begin{lemma}[Global Substitutions] $ $
  \begin{itemize}
  \item If $\ltyping i t T$ and $\typing[\Psi']\sigma{\Psi}$, then
    $\ltyping[\Psi'][\Gamma[\sigma]] i{t[\sigma]}{T[\sigma]}$. 
  \item If $\ltyping i \delta \Delta$ and $\typing[\Psi']\sigma{\Psi}$, then
    $\ltyping[\Psi'][\Gamma[\sigma]] i{\delta[\sigma]}{\Delta[\sigma]}$. 
  \end{itemize}
\end{lemma}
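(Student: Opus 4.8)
The plan is to prove both statements simultaneously by mutual induction on the derivations of $\ltyping i t T$ and $\ltyping i \delta \Delta$, with a case analysis on the last rule applied. Two ingredients are needed before the induction proper. The first is the analogous preservation statement for well-formedness judgments: if $\ljudge i T$ and $\typing[\Psi']\sigma\Psi$ then $\ljudge[\Psi'] i {T[\sigma]}$, and similarly for local contexts; this is proved by its own mutual induction, closely paralleling \Cref{lem:cvar:glob-wk-ty}, and I would dispatch it first. The second consists of two lookup sublemmas, proved by induction on $\typing[\Psi']\sigma\Psi$: (i) if $g : \Ctx \in \Psi$ then $\ljudge[\Psi'] 0 {\sigma(g)}$; and (ii) if $u : (\judge[\Delta] T) \in \Psi$ then $\ltyping[\Psi'][\Delta[\sigma]] 0 {\sigma(u)}{T[\sigma]}$. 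Both use \Cref{lem:cvar:gsubst-skip} to absorb the $p(\id)$-weakening built into the membership relation on global contexts. Given these, the easy cases are all immediate: the arithmetic cases $\ze$ and $\su t$, the ordinary function cases $\lambda x.t$ and $s~t$, and the cons case $\delta, t/x$ follow from the induction hypotheses and the defining equations of the global substitution action; for a local variable $x$ we use $x[\sigma] = x$ together with $x : T[\sigma] \in \Gamma[\sigma]$ and $\ljudge[\Psi'] i {\Gamma[\sigma]}$; and for a global variable $u^\delta$ we unfold $u^\delta[\sigma] = \sigma(u)[\delta[\sigma]]$, apply the induction hypothesis to $\delta$, type $\sigma(u)$ via sublemma (ii) (lifting to layer $1$ if $i = 1$), and finish with the Local Substitutions lemma.

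The genuinely interesting cases are those that extend the global context, and they follow the pattern already exhibited in the proof of \Cref{lem:cvar:glob-wk-tm}. For $\boxit t$ we push $\sigma$ inside, $(\boxit t)[\sigma] = \boxit{(t[\sigma])}$, and re-apply the constructor to the induction hypothesis. For $\letbox u s t$ we observe $(\letbox u s t)[\sigma] = \letbox u {s[\sigma]}{(t[q(\sigma)])}$; the Global Weakening lemma for $\sigma$ shows $q(\sigma)$ is a well-typed global substitution into the $u$-extended context, so the induction hypotheses apply to $s$ and to $t$, and the conclusion is re-indexed using routine algebraic identities such as $\Gamma[p(\id)][q(\sigma)] = \Gamma[\sigma][p(\id)]$. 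The $\Lambda g.t$ case is handled the same way with $q(\sigma)$ for the $g$-extension and the identity $((g:\Ctx)\STo T)[\sigma] = (g:\Ctx)\STo(T[q(\sigma)])$. In the $t~\$~\Delta$ case the induction hypothesis gives $t[\sigma]$ of type $(g:\Ctx)\STo(T[q(\sigma)])$, the well-formedness statement gives $\ljudge[\Psi'] 0 {\Delta[\sigma]}$, the constructor produces the type $T[q(\sigma)][\id_{\Psi'},\Delta[\sigma]/g]$, and we must show this equals $(T[\id_{\Psi},\Delta/g])[\sigma]$ — the same ``by naturality'' step that occurs in \Cref{lem:cvar:glob-wk-tm}, now for an arbitrary $\sigma$.

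For the second part, the base cases of local substitutions need the usual care. For $\wk_g^m$ the premises force $g:\Ctx\in\Psi$, $\Gamma$ to end with $g$, and $|\Gamma| = m$, so $\Gamma[\sigma]$ is $\sigma(g)$ extended by $m$ substituted bindings, and $\wk_g^m[\sigma] = \id_{\sigma(g)}[\id; p^m(\id)]$ is precisely the local weakening substitution $\wk^m_{\sigma(g)}$; its well-typedness as a substitution from $\Gamma[\sigma]$ down to $\sigma(g)$ is the Typing of Local Weakening Substitutions lemma, with $\ljudge[\Psi'] 0 {\sigma(g)}$ supplied by sublemma (i). The cases $\cdot^m$ and $\cdot_g^m$ then reduce, exactly as in the proof of the Local Substitutions lemma, to checking that the ending ($\cdot$ versus some $g'$) and the length of $\Gamma[\sigma]$ match whichever clause of the defining equation for $\cdot_g^m[\sigma]$ fires; this uses the $\widecheck{\,\cdot\,}$-bookkeeping and the well-formedness statement for $\Gamma[\sigma]$.

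The main obstacle is not the case analysis, which closely mirrors \Cref{lem:cvar:glob-wk-tm}, but the commutation lemmas the context-extending cases consume — above all the general naturality identity $T[q(\sigma)][\id_{\Psi'},\Delta[\sigma]/g] = (T[\id_{\Psi},\Delta/g])[\sigma]$ and its term-level analogue. The Naturality lemma as stated covers only the case where $\sigma$ is a \emph{weakening} $\gamma$; generalizing it to an arbitrary global substitution is the crux, and — as the authors already flag for \Cref{lem:cvar:wk-p-n-tm} and the left-identity law — it may demand a strengthened induction hypothesis to handle the global-context extensions introduced by $\tletbox$ and $\Lambda$. I would therefore prove that generalized naturality lemma first, by induction on $T$, leaning on the Composition of Global Substitutions lemma together with \Cref{lem:cvar:wk-p-n-ty} and \Cref{lem:cvar:wk-p-n-tm}; once it is available, the global substitution lemma follows by the routine induction sketched above.
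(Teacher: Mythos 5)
Your proposal is correct and follows essentially the same outline as the paper's proof: mutual induction on the typing derivations, with the interesting cases being the global-context-extending constructs ($\boxit{\cdot}$, $\tletbox$, $\Lambda$, $\$$) and the base cases of local substitutions, discharged with the aid of \Cref{lem:cvar:gsubst-skip} and algebraic laws. The case analyses you sketch for $\cdot^m$, $\cdot_g^m$, and $\wk_g^m$ match what the paper does (case-split on whether $\sigma(g)$ ends in $\cdot$ or a context variable). The one place where you overestimate the difficulty is the ``generalized naturality'' equation $T[q(\sigma)][\id_{\Psi'},\Delta[\sigma]/g] = T[\id_{\Psi},\Delta/g][\sigma]$. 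You propose to prove this as a standalone lemma by a fresh induction on $T$, worrying that it may need a strengthened induction hypothesis as \Cref{lem:cvar:wk-p-n-tm} and left-identity do. In fact the paper derives this equation without any new induction at all: it computes both sides using the Composition of Global Substitutions lemma ($T[\sigma][\sigma'] = T[\sigma\circ\sigma']$), the cons clause of composition, \Cref{lem:cvar:gsubst-skip} to collapse $\sigma[p(\id)]\circ(\id_{\Psi'},\Delta[\sigma]/g)$ to $\sigma\circ\id_{\Psi'}$, and then right and left identity, arriving at $T[\sigma,\Delta[\sigma]/g]$ on both sides. Since those component lemmas are already in hand before the global substitution lemma, the equation is a plain chain of rewrites rather than a fresh inductive obligation; your route would still work but does more than necessary. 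Everything else — your two lookup sublemmas, the use of $q(\sigma) = \sigma[p(\id)], u^\id/u$ or $\sigma[p(\id)], g/g$ for the binder cases, and the reliance on the well-formedness substitution lemma and Typing of Local Weakening Substitutions for the base cases — is exactly what the paper's argument rests on.
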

\begin{proof}
  We do a mutual induction. %
  We consider a few interesting cases:
  \begin{itemize}[label=Case]
  \item
    \begin{mathpar}
      \inferrule
      {\ltyping 1 {s}{\cont[\Delta] T} \\
        \ljudge[\Psi] 1{T'} \\
        \ltyping[\Psi, u : (\judge[\Delta] T)][\Gamma[p(\id)]] 1 {t}{T'[p(\id)]}}
      {\ltyping 1 {\letbox u s t} T'}
    \end{mathpar}
    \begin{align*}
      & \ltyping[\Psi'][\Gamma[\sigma]] 1 {s[\sigma]}{\cont[\Delta] T [\sigma] = \cont[\Delta[\sigma]]{T[\sigma]} }
        \byIH \\
      & \typing[\Psi', u : (\judge[\Delta[\sigma]]{T[\sigma]})]{\sigma[p(\id)],
        u^{\id_{\Delta[\sigma[p(\id)]]}}/u}{\Psi, u : (\judge[\Delta] T)}
        \tag{by typing rule} \\
      & \ltyping[\Psi', u : (\judge[\Delta[\sigma]]{T[\sigma]})][\Gamma[p(\id)][\sigma[p(\id)],
        u^{\id_{\Delta[\sigma[p(\id)]]}}/u]] 1 {t[\sigma[p(\id)],
        u^{\id_{\Delta[\sigma[p(\id)]]}}/u]}{T'[p(\id)][\sigma[p(\id)],
        u^{\id_{\Delta[\sigma[p(\id)]]}}/u]}
        \byIH \\
      & \ltyping[\Psi', u : (\judge[\Delta[\sigma]]{T[\sigma]})][\Gamma[\sigma[p(\id)]]] 1 {t[\sigma[p(\id)],
        u^{\id_{\Delta[\sigma[p(\id)]]}}/u]}{T'[\sigma[p(\id)]]}
        \tag{by \Cref{lem:cvar:gsubst-skip}} \\
      & \ltyping[\Psi', u : (\judge[\Delta[\sigma]]{T[\sigma]})][\Gamma[\sigma][p(\id)]] 1 {t[\sigma[p(\id)],
        u^{\id_{\Delta[\sigma][p(\id)]}}/u]}{T'[\sigma][p(\id)]}
        \tag{by algebraic laws} \\
      & \ltyping[\Psi'][\Gamma[\sigma]] 1 {\letbox u s t}{T'[\sigma]}
        \tag{by typing rule}
    \end{align*}
    
  \item
    \begin{mathpar}
      \inferrule
      {\ljudge[\Psi] 1 \Gamma \\ \ltyping[\Psi, g: \Ctx][\Gamma[p(\id)]] 1 t T}
      {\ltyping{1}{\Lambda g. t}{(g : \Ctx) \STo T}}
    \end{mathpar}
    Basically we do the same as above, but instead of providing a global substitution
    of terms, we provide a global substitution of a local context.
    
  \item
    \begin{mathpar}
      \inferrule
      {\ltyping{1}{t}{(g : \Ctx) \STo T} \\ \ljudge[\Psi] 0 \Delta}
      {\ltyping{1}{t~\$~\Delta}{T[\id_{\Psi}, \Delta/g]}}
    \end{mathpar}
    \begin{align*}
      &\ltyping[\Psi'][\Gamma[\sigma]]{1}{t[\sigma]}{(g : \Ctx) \STo T[\sigma] = (g :
      \Ctx) \STo (T[\sigma[p(\id)], g/g])}
      \byIH \\
      & \ljudge[\Psi'] 0 {\Delta[\sigma]}
        \byIH \\
      & \ltyping[\Psi'][\Gamma[\sigma]]{1}{t~\$~\Delta[\sigma]}{T[\sigma[p(\id)],
        g/g][\id_{\Psi'}, \Delta[\sigma]/g]}
        \tag{by typing rule}
    \end{align*}
    Finally we must show the following equation:
    \begin{align*}
      T[\sigma[p(\id)],
      g/g][\id_{\Psi'}, \Delta[\sigma]/g] =
      T[\id_{\Psi}, \Delta/g][\sigma]
    \end{align*}
    We reason as follows:
    \begin{align*}
      T[\sigma[p(\id)], g/g][\id_{\Psi'}, \Delta[\sigma]/g]
      &= T[(\sigma[p(\id)], g/g) \circ (\id_{\Psi'}, \Delta[\sigma]/g)] \\
      &= T[(\sigma[p(\id)] \circ (\id_{\Psi'}, \Delta[\sigma]/g)), \Delta[\sigma] ] \\
      &= T[(\sigma \circ \id_{\Psi'}), \Delta[\sigma] ]
      \tag{by \Cref{lem:cvar:gsubst-skip}}\\
      &= T[\sigma, \Delta[\sigma] ]
        \tag{by right identity}
    \end{align*}
    On the right hand side,
    \begin{align*}
      T[\id_{\Psi}, \Delta/g][\sigma]
      &= T[(\id_{\Psi} \circ \sigma), \Delta[\sigma]/g] \\
      &= T[\sigma, \Delta[\sigma]/g]
        \tag{by left identity}
    \end{align*}
    Thus both sides agree. 
  \item
    \begin{mathpar}
      \inferrule
      {\ljudge i \Gamma \\ g : \Ctx \in \Psi \\ \text{$\Gamma$ ends with $g$} \\ |\Gamma| = m}
      {\ltyping i {\cdot_g^m}{\cdot}}
    \end{mathpar}
    In this case, we must look up $g$ in $\sigma$, and branch depending on
    $\sigma(g)$.
    \begin{itemize}[label=Subcase]
    \item
      If $\sigma(g) = g', \Delta$ meaning that $\sigma(g)$ ends with a context
      variable $g'$, then we must construct a typing judgment for $\cdot_{g'}^{m'}$
      for some $m'$. %
      In this case, say if $\Gamma = g, \Gamma'$, then $\Gamma[\sigma] = g', \Delta,
      (\Gamma'[\sigma])$. %
      Therefore, $m' = |\Delta| + |\Gamma'[\sigma]| = |\Delta| + m$. 
    \item If $\sigma(g) = \cdot, \Delta$, then we proceed similarly except that we
      must construct a typing judgment for $\cdot^{m'}$ instead. 
    \end{itemize}
    
  \end{itemize}
\end{proof}

\begin{lemma}[Distributivity of Global Substitutions] $ $
  \begin{itemize}
  \item If $\ltyping i t T$, $\ltyping[\Psi][\Delta] i \delta \Gamma$ and
    $\typing[\Phi] \sigma \Psi$, then
    $t[\delta][\sigma] = (t[\sigma][\delta[\sigma]])$.
  \item If $\ltyping i \delta \Delta$, $\ltyping[\Psi][\Gamma'] i {\delta'} \Gamma$
    and $\typing[\Phi] \sigma \Psi$, then
    $(\delta \circ \delta') [\sigma] = (\delta[\sigma]) \circ (\delta'[\sigma])$.
  \end{itemize}
\end{lemma}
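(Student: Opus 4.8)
The plan is to establish both parts simultaneously by a mutual induction --- the first part on the typing derivation $\ltyping i t T$, the second on the typing derivation of the first local substitution $\ltyping i \delta \Delta$ --- since the two statements are genuinely interdependent. In the $u^{\delta'}$ case of the first part, unfolding the substitution clauses gives $\sigma(u)[(\delta' \circ \delta)[\sigma]]$ on the left, while on the right $u^{\delta'}[\sigma][\delta[\sigma]] = \sigma(u)[\delta'[\sigma]][\delta[\sigma]]$, which the \emph{Composition and Associativity of Local Substitutions} lemma (applied to $\sigma(u)$, well typed since $\sigma$ is) rewrites to $\sigma(u)[(\delta'[\sigma]) \circ (\delta[\sigma])]$; the two then agree precisely by the second part applied to $\delta'$. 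Conversely, the cons case $(\delta, t/x) \circ \delta'$ of the second part pushes $\sigma$ through the tail by the second-part IH and through the head $t$ by the first part. Measured by derivation size this is a well-founded mutual recursion.

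First I would dispatch the structural cases. The cases $\ze$, $\su t$ and application follow directly from the IHs once the substitution clauses are unfolded. The cases $\boxit t$ and the meta-application $t~\$~\Gamma$ are trivial because a local substitution does not descend into the boxed subterm nor touch the context argument; for $t~\$~\Gamma$ both sides reduce, after the first-part IH on $t$, to the same term. The variable case $x$ holds because a global substitution acts pointwise on a local substitution, so $(\delta(x))[\sigma] = (\delta[\sigma])(x)$ by definition. For the binder $\lambda x. t$, applying the IH to the body leaves the obligation $q(\delta)[\sigma] = q(\delta[\sigma])$, which unfolds to $\delta[\id; p(\id)][\sigma] = \delta[\sigma][\id; p(\id)]$ --- an instance of the already-established \emph{Commutativity of Global Substitutions and Local Weakenings}. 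For $\letbox u s t$ and $\Lambda g. t$ the IH on the body leaves, respectively, $\delta[p(\id)][\sigma[p(\id)], u^{\id}/u] = \delta[\sigma][p(\id)]$ and $\delta[p(\id)][\sigma[p(\id)], g/g] = \delta[\sigma][p(\id)]$, which I would settle via \Cref{lem:cvar:gsubst-skip} with $n = 0$ (the clause that discards a freshly inserted binding through $p(\id)$) together with the algebra law $\delta[\sigma][\gamma] = \delta[\sigma[\gamma]]$.

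For the second part I would induct on $\delta$. The cons case is routine, as noted above. The delicate cases are the three base forms $\cdot^m$, $\cdot_g^m$ and $\wk_g^m$, where the equation hinges on how a global substitution changes the weakening count $\widehat{\delta'}$ and the base context variable $\widecheck{\delta'}$ of the second factor. For $\wk_g^m$, since $\wk_g^m \circ \delta' = \wk_g^{\widehat{\delta'}}$ and $\wk_g^m[\sigma] = \id_{\sigma(g)}[\id; p^m(\id)]$, I would branch on $\sigma(g)$ and close the goal using a small auxiliary fact about local weakening substitutions --- a ``left identity up to weakening'' relating $\id_{\sigma(g)}[\id; p^m(\id)] \circ (\delta'[\sigma])$ to $\id_{\sigma(g)}[\id; p^{\widehat{\delta'}}(\id)]$ --- plus the observation that global substitution tracks the weakening count consistently. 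For $\cdot^m$ and $\cdot_g^m$ I would, exactly as in the proof of the \emph{Local Substitutions} lemma, perform a case analysis on $\widecheck{\delta'}$ (and on $\sigma(g)$ in the $\cdot_g^m$ case) to decide whether the composite's base is a context variable or $\cdot$, recomputing the weakening count accordingly.

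The hard part will be this last layer of bookkeeping: the $\widehat{\cdot}$/$\widecheck{\cdot}$ accounting in the base cases of the second part and the naturality-style equations in the $\letbox$ and $\Lambda$ cases. Everything else is a routine structural induction driven by the substitution clauses and the algebraic lemmas already in hand, and --- as with the earlier lemmas in this section --- I expect the fully detailed treatment of the weakening-count reasoning to be deferred to the Agda mechanization.
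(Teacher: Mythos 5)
Your proposal is correct and takes essentially the same approach as the paper: a mutual induction on the two typing derivations, dispatching the structural term cases via the algebraic/commutativity lemmas already in hand and the gsubst-skip lemma under the $\tletbox$/$\Lambda$ binders, with the real work landing in the base cases $\cdot^m$, $\cdot_g^m$, $\wk_g^m$ of local substitutions, where one must case-analyze on $\widecheck{\delta'}$ (and on $\sigma(g)$) to reconcile the weakening counts and base context variables. The paper's own proof says exactly this and likewise defers the detailed weakening-count bookkeeping to the Agda mechanization.
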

\begin{proof}
  We do a mutual induction on $\ltyping i t T$ and $\ltyping i \delta \Delta$. %
  The difficulty is coming from the base cases of local substitutions. %
  We must incorporate the shape of $\delta'$ depending on the cases of $\delta$. %
  We refer the readers to the Agda mechanization for the detailed proof and how we do
  case analysis on the base cases. 

\end{proof}

\subsection{Equivalence Rules}

In this section, we describe the equivalence rules. %
They follow closely to the equivalence rules by \citet[Sec. 4]{hu2024layered}. %
We only show the rules for the newly added constructs.
\begin{mathpar}
  \inferrule
  {\ljudge[\Psi] 1 \Gamma \\ \ltyequiv[\Psi, g: \Ctx][\Gamma[p(\id)]] 1 t {t'} T}
  {\ltyequiv{1}{\Lambda g. t}{\Lambda g. t'}{(g : \Ctx) \STo T}}

  \inferrule
  {\ltyequiv{1}{t}{t'}{(g : \Ctx) \STo T} \\ \ljudge[\Psi] 0 \Delta}
  {\ltyequiv{1}{t~\$~\Delta}{t'~\$~\Delta}{T[\id_{\Psi}, \Delta/g]}}

  \inferrule
  {\ljudge[\Psi] 1 \Gamma \\ \ltyping[\Psi, g: \Ctx][\Gamma[p(\id)]] 1 t T \\ \ljudge[\Psi] 0 \Delta}
  {\ltyequiv{1}{(\Lambda g. t)~\$~\Delta}{t[\id_{\Psi}, \Delta/g]}{T[\id_{\Psi}, \Delta/g]}}

  \inferrule
  {\ltyping{1}{t}{(g : \Ctx) \STo T}}
  {\ltyequiv{1}{t}{\Lambda g. (t[p(\id)])~\$~g}{(g : \Ctx) \STo T}}
\end{mathpar}
In the rules above, we specify the congruence for meta-abstraction $\Lambda$ and the
meta-application. %
Moreover, they also have $\beta$ and $\eta$ rules in the expected ways.

We also have a equivalence judgment for local substitutions
\[
  \ltyequiv 1 \delta{\delta'} \Delta
\]
The only rules are the congruence rules determined by all possible constructors. %
Both equivalence judgments for terms and local substitutions must be mutually
defined. %

Note that there is no need to define the equivalence for global substitutions. %
Effectively, the equivalence for global substitutions is defined as the equality. %
This is because all terms stored in a global substitution are at layer $0$, and thus
they do not have meaningful dynamics. %
Therefore, equivalent global substitutions must also be equal. %

We first establish the presupposition lemma for equivalence:
\begin{lemma}[Presupposition] $ $
  \begin{itemize}
  \item If $\ltyequiv 1 t {t'} T$, then $\ltyping 1 t T$ and $\ltyping 1 {t'} T$.
  \item If $\ltyequiv 1 \delta {\delta'} \Delta$, then $\ltyping 1 \delta \Delta$ and $\ltyping 1 {\delta'} \Delta$.
  \end{itemize}
\end{lemma}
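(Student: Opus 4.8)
The plan is to proceed by mutual induction on the derivations of $\ltyequiv 1 t {t'} T$ and $\ltyequiv 1 \delta {\delta'} \Delta$, mirroring the presupposition lemma for typing: for each equivalence rule we build the two required typing derivations, and since the construction is symmetric in the two sides it suffices to describe it once per rule. The equivalence rules split into congruence rules and the $\beta$/$\eta$ rules for $\Lambda$ and $\$$; the congruence rules are routine.

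First I would dispatch all congruence rules uniformly. For a congruence rule, the induction hypotheses give typing derivations for the immediate subterms (or sub-substitutions) of both $t$ and $t'$, and reassembling them with the matching typing rule yields $\ltyping 1 t T$ and $\ltyping 1 {t'} T$. Where the matching typing rule carries shaded premises — as the rule for $\tletbox$ does — those premises are recovered from the already-established presupposition lemma for terms. The judgment $\ltyequiv 1 \delta {\delta'} \Delta$ has only congruence rules, so it is handled in the same sweep, using the typing rules for $\cdot^m$, $\cdot_g^m$, $\wk_g^m$ and the cons rule together with the IH on the term component.

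The two computation rules are where the structural results of the previous subsection come in. For the $\beta$-rule $(\Lambda g. t)~\$~\Delta \approx t[\id_{\Psi}, \Delta/g]$, the premises of the rule directly supply $\ljudge[\Psi] 1 \Gamma$, $\ltyping[\Psi, g:\Ctx][\Gamma[p(\id)]] 1 t T$, and $\ljudge[\Psi] 0 \Delta$; the left-hand side is then typed by reassembling $\Lambda$ and meta-application, which assigns it $T[\id_{\Psi}, \Delta/g]$. For the right-hand side one checks that $\id_{\Psi}, \Delta/g$ is a well-typed global substitution from $\Psi$ to $\Psi, g:\Ctx$ (the corollary that $\id_{\Psi}$ is well-typed, extended by the rule for appending a local context), and then applies the Global Substitutions lemma to the premise typing of $t$; the domain local context it produces, $\Gamma[p(\id)][\id_{\Psi}, \Delta/g]$, collapses to $\Gamma$ by \Cref{lem:cvar:gsubst-skip} and the algebra of global substitutions, and the type it produces is exactly $T[\id_{\Psi}, \Delta/g]$, so both sides receive the same type. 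For the $\eta$-rule $t \approx \Lambda g. (t[p(\id)])~\$~g$, the left-hand side typing is the premise itself; the right-hand side is typed by globally weakening $\ltyping 1 t {(g:\Ctx)\STo T}$ along $p(\id) : \Psi, g:\Ctx \To_g \Psi$ via \Cref{lem:cvar:glob-wk-tm} (whose output type is $(g:\Ctx)\STo T[q(p(\id))]$), then applying the meta-application rule to the context variable $g$ (well-formed since $g:\Ctx \in \Psi, g:\Ctx$), then abstracting with $\Lambda$ using the presupposed $\ljudge[\Psi] 1 \Gamma$.

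The main obstacle is the bookkeeping in the $\eta$-case: the body of the reconstructed meta-abstraction receives the type $T[q(p(\id))][\id_{\Psi, g:\Ctx}, g/g]$, and to close the case one must show this equals $T$ as a type over $\Psi, g:\Ctx$, so that $\Lambda$ reassembles precisely $(g:\Ctx)\STo T$. This is a pure equational calculation on global weakenings and substitutions — unfold $q(p(\id))$, use the naturality lemma to move the trailing substitution past the $q$-weakening, and collapse with the algebra of global substitutions together with the right-identity law $\sigma \circ \id_{\Psi} = \sigma$. No single step is deep, but selecting the right instances of the algebraic and naturality lemmas, and keeping the shadowed name $g$ straight between the ambient context and the bound meta-variable, is where care is needed; everything else in the proof is an application of an induction hypothesis followed by a typing rule.
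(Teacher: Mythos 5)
Your proposal is correct and follows the same route as the paper: mutual induction on the equivalence derivations, discharging congruence rules directly and invoking the global weakening and global substitution lemmas for the $\beta$ and $\eta$ rules of meta-functions. The paper records exactly this strategy in one sentence; your elaboration of the $\eta$-case bookkeeping (reconciling $T[q(p(\id))][\id_{\Psi,g:\Ctx},g/g]$ with $T$ via \Cref{lem:cvar:gsubst-skip} and the weakening/substitution algebra) is the detail the paper leaves implicit.
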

\begin{proof}
  We perform a mutual induction. %
  For the $\beta$ and $\eta$ rules, we apply substitution lemmas proved in the
  previous section. 
\end{proof}

\begin{lemma}[Local Weakenings] $ $
  \begin{itemize}
  \item If $\ltyequiv[\Psi][\Delta] 1 t{t'} T$ and $\tau : \Psi; \Gamma \To_1 \Delta$,
    then $\ltyequiv 1 {t[\tau]}{t'[\tau]} T$.
  \item If $\ltyequiv[\Psi][\Delta] 1 \delta{\delta'} \Delta'$ and
    $\tau : \Psi; \Gamma \To_1 \Delta$, then
    $\ltyequiv 1 {\delta[\tau]}{\delta'[\tau]} \Delta'$.
  \end{itemize}
\end{lemma}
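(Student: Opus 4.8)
The plan is to prove both statements by a single simultaneous induction on the derivations of $\ltyequiv[\Psi][\Delta] 1 t {t'} T$ and $\ltyequiv[\Psi][\Delta] 1 \delta {\delta'} \Delta'$, mirroring the proof of the Local Weakenings lemma for the typing judgments. Throughout I would freely use the presupposition lemma for equivalence (to recover the well-formedness side conditions of the congruence rules after weakening), the Local Weakenings lemma for typing (to re-derive the \emph{typing} premises that occur inside $\beta$- and $\eta$-rules), and the observation that a local weakening is inert under a global weakening, so that $\tau$ can be reused verbatim over an extended global context such as $\Psi, g : \Ctx$ or $\Psi, u : (\judge[\Delta] T)$. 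Since the equivalence judgment for local substitutions consists only of congruence rules, its half of the induction is entirely subsumed by the congruence analysis below (for instance, the $\delta, t/x$ congruence invokes the term induction hypothesis on $t$).

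First I would dispatch the congruence rules. For the leaves — variables, $\ze$, and the reflexive base case for $u^\delta$ — one applies $[\tau]$ to both sides (which by definition only re-indexes the term or bumps the index $m$ in $\wk_g^m$ and $\cdot_{g?}^m$) and re-applies the same rule, with the side conditions coming from presupposition. For congruences that cross a term binder — $\lambda x.\,(-)$ — the induction hypothesis is invoked with $q(\tau)$, using $\lambda x. t[\tau] = \lambda x.\,(t[q(\tau)])$; for the $\Lambda g.\,(-)$ congruence and for the body of a $\tletbox$ the induction hypothesis is invoked with $\tau$ itself over the extended global context, since those constructs grow the global, not the local, context. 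The remaining congruences ($\tsucc$, application, meta-application, the scrutinee of $\tletbox$, $\boxit{(-)}$) are immediate: push $[\tau]$ inside with the defining equations and re-apply the rule to the induction hypotheses.

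The substantive cases are the $\beta$- and $\eta$-rules. Consider the meta-function $\eta$-rule $t \approx \Lambda g.\,(t[p(\id)])~\$~g$. Applying $[\tau]$ gives $t[\tau]$ on the left and $\Lambda g.\,((t[p(\id)])[\tau])~\$~g$ on the right; by the algebra of local weakenings ($t[\gamma][\tau] = t[\tau][\gamma]$) the latter equals $\Lambda g.\,((t[\tau])[p(\id)])~\$~g$, which is literally the $\eta$-expansion of $t[\tau]$, so the same rule applies, its typing premise supplied by the Local Weakenings lemma for typing. For the meta-function $\beta$-rule $(\Lambda g. t)~\$~\Delta \approx t[\id_{\Psi}, \Delta/g]$, the left side weakens to $(\Lambda g.\,(t[\tau]))~\$~\Delta$ and the right to $(t[\id_{\Psi}, \Delta/g])[\tau]$; here I would invoke the Commutativity of Global Substitutions and Local Weakenings lemma to rewrite $(t[\id_{\Psi},\Delta/g])[\tau] = (t[\tau])[\id_{\Psi},\Delta/g]$ — legitimate because $\tau$ is unaffected by the global substitution — after which the $\beta$-rule fires on $t[\tau]$, its typing and well-formedness premises supplied by presupposition and the typing-level Local Weakenings lemma. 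The $\beta/\eta$-rules inherited from \citet{hu2024layered} for $\boxit{(-)}$ and $\tletbox$ go through in the same fashion.

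I expect the main obstacle to be the bookkeeping in the $\tletbox$- and meta-function-$\beta$-cases: transporting $\tau$ past a global substitution of the form $\sigma[p(\id)], u^\id/u$ or $\id_{\Psi}, \Delta/g$, while simultaneously keeping the weakening-count index $m$ on the base local substitutions $\wk_g^m$ and $\cdot_{g?}^m$ synchronized on both sides of each equation — local weakening adds the number of $p$-constructors of $\tau$ to $m$, and one must check that this amount is added consistently. None of this is deep: every required identity is an instance of the algebra-of-weakenings lemmas, the Commutativity lemma, or the skip lemmas (\Cref{lem:cvar:gsubst-skip}, \Cref{lem:cvar:lsubst-skip}); the only real risk is a combinatorial slip, and the cleanest route is to discharge the index-heavy cases by appeal to the Agda mechanization, as the paper does for several neighbouring lemmas.
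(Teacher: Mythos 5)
Your proposal is correct and takes essentially the same route as the paper, which gives only the one-line remark that the proof "follows the local weakening property for typing above." Your exposition fills in exactly the details that remark elides — the mutual induction mirroring the typing-level argument, the reuse of $\tau$ over extended global contexts (justified by the lemma that local weakenings survive global weakenings unchanged), the application of the algebra-of-weakenings lemma for the $\eta$-cases, and the use of the commutativity-of-global-substitutions-and-local-weakenings lemma for the $\beta$-cases — and all of those appeals land on lemmas the paper has already established at the point this lemma is stated.
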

\begin{proof}
  We follow the local weakening property for typing above. 
\end{proof}

A counterpart is w.r.t to global weakenings:
\begin{lemma}[Global Weakenings] $ $
  \begin{itemize}
  \item If $\ltyequiv[\Phi] 1 t{t'} T$ and $\gamma : \Psi \To_g \Phi$,
    then $\ltyequiv[\Psi][\Gamma[\gamma]] 1 {t[\gamma]}{t'[\gamma]}{T[\gamma]}$.
  \item If $\ltyequiv[\Phi] 1 \delta{\delta'} \Delta$ and $\gamma : \Psi \To_g \Phi$,
    then $\ltyequiv[\Psi][\Gamma[\gamma]] 1 {\delta[\gamma]}{\delta'[\gamma]}{\Delta[\gamma]}$.
  \end{itemize}
\end{lemma}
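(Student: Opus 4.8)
The plan is to prove both clauses simultaneously by mutual induction on the derivations of $\Phi; \Gamma \vdash_1 t \approx t' : T$ and $\Phi; \Gamma \vdash_1 \delta \approx \delta' : \Delta$, following exactly the strategy used for \Cref{lem:cvar:glob-wk-tm}. The well-formedness and typing side conditions that occur in the equivalence rules are discharged by \Cref{lem:cvar:glob-wk-ty} and \Cref{lem:cvar:glob-wk-tm}, so the equivalence proof never has to re-derive them.

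For reflexivity, symmetry, transitivity, and the congruence rules of every term and local-substitution constructor the argument is routine: apply the induction hypothesis to each sub-derivation and reassemble with the same rule. As in \Cref{lem:cvar:glob-wk-tm}, the congruence cases for $\Lambda g.\, t$ and $t~\$~\Delta$ require the usual rewriting of the weakened codomain context and type annotation — one uses the algebra of global weakenings to obtain $\Gamma[p(\id)][q(\gamma)] = \Gamma[\gamma][p(\id)]$, and for meta-application one invokes the Naturality lemma to rewrite $T[q(\gamma)][\id, \Delta[\gamma]/g] = T[\id, \Delta/g][\gamma]$. The $\tletbox$ congruence case needs precisely the same context/type bookkeeping that appears in that earlier proof.

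The only genuinely new work lies in the $\beta$ and $\eta$ rules, and it amounts to commuting the weakening past the relevant substitution. For the meta-function $\beta$ rule $(\Lambda g.\, t)~\$~\Delta \approx t[\id_{\Phi}, \Delta/g]$: applying $\gamma$ to the left-hand side yields $(\Lambda g.\, (t[q(\gamma)]))~\$~(\Delta[\gamma])$, which by the same $\beta$ rule is equivalent to $t[q(\gamma)][\id_{\Psi}, \Delta[\gamma]/g]$; by the Naturality lemma this equals $t[\id_{\Phi}, \Delta/g][\gamma]$, i.e. the weakened right-hand side. For the $\eta$ rule $t \approx \Lambda g.\, (t[p(\id)])~\$~g$: applying $\gamma$ to the right-hand side gives $\Lambda g.\, (t[p(\id)][q(\gamma)])~\$~g$, and since $p(\id) \circ q(\gamma) = p(\gamma) = \gamma \circ p(\id)$ by the algebra of weakenings, we get $t[p(\id)][q(\gamma)] = t[\gamma][p(\id)]$, so the right-hand side is $\Lambda g.\, (t[\gamma][p(\id)])~\$~g$, the $\eta$-expansion of $t[\gamma]$, which is equivalent to $t[\gamma]$ by the $\eta$ rule (applicable because $t[\gamma]$ is well-typed by \Cref{lem:cvar:glob-wk-tm}). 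The box and $\tletbox$ $\beta$ and $\eta$ rules are handled identically, commuting $\gamma$ past the induced global substitution via the algebraic laws and Naturality.

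I expect the main obstacle to be, as usual in this section, the weakening-algebra bookkeeping in the $\Lambda$ and $\tletbox$ cases — making the codomain local contexts and the type annotations on the two sides of each equivalence coincide syntactically after weakening. But these are exactly the identities already established for \Cref{lem:cvar:glob-wk-tm}, so the equivalence proof simply reuses them (once per side of the equivalence, plus once more for the redex/contractum pair in the $\beta$ cases). As with the other lemmas in this section, the exhaustive case analysis is carried out in the Agda mechanization.
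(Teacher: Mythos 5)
Your proposal is correct and follows exactly the approach the paper intends (its own proof is the one-liner ``Similarly, we follow the global weakening property for typing above''): mutual induction on the equivalence derivations, reusing the bookkeeping from \Cref{lem:cvar:glob-wk-tm,lem:cvar:glob-wk-ty} for the congruence rules, and commuting the weakening past the induced substitutions in the $\beta$/$\eta$ cases. One small imprecision worth flagging: the Naturality lemma is stated in the paper only for types and local contexts, so the step you justify ``by the Naturality lemma'' in the meta-function $\beta$ case — namely $t[q(\gamma)][\id_{\Psi}, \Delta[\gamma]/g] = t[\id_{\Phi}, \Delta/g][\gamma]$ — needs an unstated term-level analogue. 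This is not a real gap, since it follows by the same calculation the paper itself carries out for $T[\sigma[p(\id)], g/g][\id, \Delta[\sigma]/g]$ in the $\$$ case of the Global Substitutions proof (compose $q(\gamma)$ with the extended identity and apply the algebraic laws of global weakenings and substitutions, e.g.\ $t[\sigma][\gamma] = t[\sigma[\gamma]]$), but you should derive it rather than cite a lemma that does not cover it.
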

\begin{proof}
  Similarly, we follow the global weakening property for typing above. 
\end{proof}

\begin{lemma}[Congruence of Local Substitutions] $ $
  \begin{itemize}
  \item If $\ltyping 1 t T$ and $\ltyequiv[\Psi][\Delta] 1 \delta{\delta'} \Gamma$, then
    $\ltyequiv[\Psi][\Delta] 1{t[\delta]}{t[\delta']} T$. 
  \item If $\ltyping 1 {\delta} {\Delta'}$ and
    $\ltyequiv[\Psi][\Delta] 1 {\delta'}{\delta''} \Gamma$, then
    $\ltyequiv[\Psi][\Delta] 1{\delta \circ \delta'}{\delta \circ \delta''}
    {\Delta'}$.
  \end{itemize}
\end{lemma}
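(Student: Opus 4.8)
The plan is to prove both items simultaneously by mutual induction on the typing derivations $\ltyping 1 t T$ (first item) and $\ltyping 1 \delta {\Delta'}$ (the left-hand argument of the composition in the second item). For each case I would unfold the relevant clause in the definition of local-substitution application (resp.\ composition) on both sides, apply the induction hypotheses to the immediate subterms and subsubstitutions, and close the case with the matching congruence rule of the term (resp.\ local-substitution) equivalence judgment; the well-formedness side conditions of those congruence rules, such as $\ljudge 1 \Gamma$, come from the Presupposition lemma. The two items are genuinely intertwined: the global-variable case $u^{\delta_0}$ of the first item has $u^{\delta_0}[\delta] = u^{\delta_0 \circ \delta}$, so it reduces to the equivalence $\delta_0 \circ \delta \approx \delta_0 \circ \delta'$, which is exactly the second item applied to $\delta_0$ (well-typed since $u^{\delta_0}$ is); conversely the cons case of the second item reduces to the first item for the term component.

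For the first item, the interesting cases are as follows. The variable case $t = x$ has $x[\delta] = \delta(x)$ and $x[\delta'] = \delta'(x)$, so it needs the auxiliary fact that if $\ltyequiv[\Psi][\Delta] 1 \delta {\delta'} \Gamma$ and $x : T \in \Gamma$ then $\ltyequiv[\Psi][\Delta] 1 {\delta(x)}{\delta'(x)} T$, which I would prove by a short side induction descending through the cons constructors of the equivalence derivation. The cases $\ze$ and $\boxit{t}$ are immediate by reflexivity of term equivalence, since local substitution does not descend into $\tbox$ (we have $(\boxit{t})[\delta] = \boxit{t}$); the cases $\su t$, $t\ s$ and $t~\$~\Gamma$ are direct applications of the induction hypothesis under the corresponding congruence rule. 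The binder cases require pushing the substitution under the binder and checking that the pushed substitutions are still equivalent: for $\lambda x. t$ we have $(\lambda x. t)[\delta] = \lambda x.(t[q(\delta)])$, and $q(\delta) \approx q(\delta')$ follows from $\delta \approx \delta'$ by the Local Weakenings lemma for equivalence (applied to $\delta[\id; p(\id)]$) together with the cons congruence rule and reflexivity of $x$; for $\tletbox$ and $\Lambda$ the body is substituted by the globally weakened $\delta[p(\id)]$, and $\delta[p(\id)] \approx \delta'[p(\id)]$ follows from the Global Weakenings lemma for equivalence. In each binder case the induction hypothesis is then invoked on the body (typed in an extended local or global context), and the case closes with the congruence rule for $\lambda$, $\tletbox$, resp.\ $\Lambda$.

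For the second item, the base cases $\wk_g^m$, $\cdot^m$ and $\cdot_g^m$ compose with $\delta'$ (resp.\ $\delta''$) to, by definition, $\wk_g^{\widehat{\delta'}}$, $\cdot_{\widecheck{\delta'}}^{\widehat{\delta'}}$, resp.\ $\cdot_g^{\widehat{\delta'}}$ (and likewise with $\delta''$); since $\widehat{(-)}$ and $\widecheck{(-)}$ only read off a length and a context-variable name that are determined by the context in which the substitution is typed, and $\delta'$ and $\delta''$ are typed in the same context, we have $\widehat{\delta'} = \widehat{\delta''}$ and $\widecheck{\delta'} = \widecheck{\delta''}$, so the two composites are literally equal and the case closes by reflexivity. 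The cons case $\delta = (\delta_0, t/x)$ has $(\delta_0, t/x) \circ \delta' = (\delta_0 \circ \delta'), t[\delta']/x$; the second-item induction hypothesis gives $\delta_0 \circ \delta' \approx \delta_0 \circ \delta''$, the first item applied to $t$ gives $t[\delta'] \approx t[\delta'']$, and we conclude with the cons congruence rule. I expect the main obstacle to be the bookkeeping around binders --- confirming that $q(\delta)$ and $\delta[p(\id)]$ of equivalent substitutions remain equivalent --- together with the care needed to see that $\widehat{(-)}$ and $\widecheck{(-)}$ agree on equivalent substitutions; both are handled by reusing the weakening-for-equivalence lemmas just established and by the observation that local-substitution equivalence is a pure congruence, so no new difficulty arises beyond a careful case analysis.
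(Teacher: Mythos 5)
Your proposal is correct and takes essentially the same approach as the paper: mutual induction on the two typing derivations, using the already-established local- and global-weakening lemmas for equivalence to handle the pushed substitutions $q(\delta)$ and $\delta[p(\id)]$ under binders, and observing that $\widehat{(\cdot)}$ and $\widecheck{(\cdot)}$ are determined by the codomain context so the base cases of composition close by reflexivity. You spell out more cases (variable, global variable, $\lambda$) than the paper's sketch, which only shows $\tletbox$ and the base cases, but nothing diverges.
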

\begin{proof}
  We do a mutual induction on $\ltyping 1 t T$ and $\ltyping 1 {\delta} {\Delta'}$. %
  We only consider a few interesting cases.
  \begin{itemize}[label=Case]
  \item
    \begin{mathpar}
      \inferrule
      {\ltyping 1 {s}{\cont[\Delta'] T} \\
        \ljudge[\Psi] 1{T'} \\
        \ltyping[\Psi, u : (\judge[\Delta'] T)][\Gamma[p(\id)]] 1 {t}{T'[p(\id)]}}
      {\ltyping 1 {\letbox u s t} T'}
    \end{mathpar}
    In this case, we must use the global weakening lemma above to derive
    \[
      \ltyequiv[\Psi][\Delta[p(\id)]] 1 {\delta[p(\id)]}{\delta'[p(\id)]}{\Gamma[p(\id)]}
    \]
    and then we apply IH.
    
  \item
    For all base cases of local substitutions, we realize that given
    $\ltyequiv[\Psi][\Delta] 1 \delta{\delta'} \Gamma$,
    we have
    \begin{itemize}
    \item $\widecheck{\delta} = \widecheck{\delta'}$, and
    \item $\widehat\delta = \widehat{\delta'}$.
    \end{itemize}
    because they characterize $\Delta$, so the exact $\delta$ and $\delta'$ are
    irrelevant. 
    
    The target goal follows immediate. 
    
  \end{itemize}
\end{proof}

\begin{lemma}[Local Substitutions] $ $
  \begin{itemize}
  \item If $\ltyequiv 1 t {t'} T$ and $\ltyequiv[\Psi][\Delta] 1 \delta{\delta'} \Gamma$, then
    $\ltyequiv[\Psi][\Delta] 1{t[\delta]}{t'[\delta']} T$. 
  \item If $\ltyequiv 1 {\delta''} {\delta'''} {\Delta'}$ and
    $\ltyequiv[\Psi][\Delta] 1 \delta{\delta'} \Gamma$, then
    $\ltyequiv[\Psi][\Delta] 1{\delta'' \circ \delta}{\delta''' \circ \delta'}
    {\Delta'}$.
  \end{itemize}
\end{lemma}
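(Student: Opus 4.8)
The plan is to run a single mutual induction on the derivations of $\ltyequiv 1 t{t'}T$ and $\ltyequiv 1{\delta''}{\delta'''}{\Delta'}$, with the previously proved \emph{Congruence of Local Substitutions} lemma as the workhorse. The organizing observation is that $t[\delta]\approx t'[\delta']$ may always be obtained as $t[\delta]\approx t'[\delta]\approx t'[\delta']$: the second step is \emph{Congruence of Local Substitutions} applied to $t'$ (which is well typed by the \emph{Presupposition} lemma for equivalence) together with the given $\delta\approx\delta'$. Hence the real content of each case is to establish $t[\delta]\approx t'[\delta]$ from $t\approx t'$ and a \emph{fixed} well-typed $\delta$, and likewise for the substitution part; this is exactly what induction on the equivalence derivation yields.

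First I would handle the equivalence-closure rules. Reflexivity ($t=t'$, $t$ well typed) is literally \emph{Congruence of Local Substitutions}. For symmetry I would apply the IH to the subderivation against the symmetric instance $\delta'\approx\delta$ (symmetry of substitution equivalence being derivable from its congruence rules and the base-case analysis below) and conclude with the symmetry rule; for transitivity through $t''$ I would apply the IH once with $\delta\approx\delta'$ to get $t[\delta]\approx t''[\delta']$ and once with the reflexive instance $\delta'\approx\delta'$ (likewise derivable, since $\delta'$ is well typed) to get $t''[\delta']\approx t'[\delta']$, then compose by transitivity. For each congruence rule I would push the local substitution through the head constructor via the definitional equations for $\cdot[\delta]$, apply the IH to the premises, and reassemble with the same congruence rule; under a binder the extended substitution must itself be shown equivalent ($q(\delta)\approx q(\delta')$ for $\lambda$, and $\delta[p(\id)]\approx\delta'[p(\id)]$ for $\Lambda g$ and $\tletbox$), which follows from the \emph{Local Weakenings} and \emph{Global Weakenings} lemmas for equivalence together with congruence of the cons constructor. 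For the substitution bullet, the interesting cases are the base local substitutions $\wk^m_g$, $\cdot^m$, $\cdot^m_g$: exactly as in the proof of \emph{Congruence of Local Substitutions}, the result depends only on the data read off by $\widehat{\cdot}$ and $\widecheck{\cdot}$, which are fixed by the (co)domain contexts, so equivalent inputs give equal outputs and the goal is immediate.

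The main obstacle will be the $\beta$ and $\eta$ rules, and in particular the meta-function fragment. In each such case the rule has a redex on the left and its contractum on the right; I must recompute both after applying $\delta$, verify that $t[\delta]$ is again a redex of the same shape whose contractum coincides with $t'[\delta]$, re-apply the same $\beta$ or $\eta$ rule, and finally append $t'[\delta]\approx t'[\delta']$ as above. For $\lambda$-$\beta$ this reduces to the identity $q(\delta)\circ(\id,s[\delta]/x)=(\id,s/x)\circ\delta$, which follows from \Cref{lem:cvar:lsubst-skip} and the local identity laws. For $\Lambda$-$\beta$ and $\Lambda$-$\eta$ the contractum carries the global substitution $\id_{\Psi},\Delta/g$, and commuting it past the local substitution $\delta$ --- which has itself been pushed under $\Lambda$ as $\delta[p(\id)]$ --- requires \emph{Distributivity of Global Substitutions}, \emph{Naturality}, and \Cref{lem:cvar:gsubst-skip} together with the left and right identity laws for global substitutions; getting this bookkeeping precisely right is the delicate step. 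A final routine check is that the mutual induction is well founded, which holds because, as stated in the previous subsection, the equivalence judgments for terms and for local substitutions are defined simultaneously, so the cross-calls in the congruence cases are legitimate.
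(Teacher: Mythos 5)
Your proposal is correct and follows essentially the same approach as the paper: mutual induction on the equivalence derivations, with the previously proved Congruence of Local Substitutions doing the work in the $\beta$/$\eta$ cases (where the premises are typing, not equivalence, so the IH cannot fire), and the algebraic toolkit (lem:cvar:gsubst-skip, distributivity, naturality, left/right identity) discharging the commutation of the global substitution $\id_\Psi,\Delta/g$ past the local $\delta$. Your up-front factorization $t[\delta]\approx t'[\delta]\approx t'[\delta']$ is a slightly cleaner packaging than the paper's, which interleaves the $\delta\to\delta'$ switch inside the $\beta$ case (deriving $t[\delta[p(\id)]]\approx t[\delta'[p(\id)]]$ first, then applying $\beta$ and transitivity). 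Note one small internal inconsistency in your write-up: having fixed $\delta$ for the inductive part, the binder cases only need $q(\delta)$ well-typed, not $q(\delta)\approx q(\delta')$; the latter belongs to the direct (non-factored) variant. Either reading gives a sound proof, so this is an exposition issue rather than a gap.
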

\begin{proof}
  We proceed by a mutual induction on $\ltyequiv 1 t {t'} T$ and
  $\ltyequiv 1 {\delta''} {\delta'''} {\Delta'}$. %
  We only look into the $\beta$ and $\eta$ rule for meta-functions, because we cannot
  apply IH:
  \begin{itemize}[label=Case]
  \item
    \begin{mathpar}
      \inferrule
      {\ljudge[\Psi] 1 \Gamma \\ \ltyping[\Psi, g: \Ctx][\Gamma[p(\id)]] 1 t T \\ \ljudge[\Psi] 0 \Delta'}
      {\ltyequiv{1}{(\Lambda g. t)~\$~\Delta'}{t[\id_{\Psi}, \Delta'/g]}{T[\id_{\Psi}, \Delta'/g]}}
    \end{mathpar}
    \begin{align*}
      & \ltyequiv[\Psi, g: \Ctx][\Delta[p(\id)]] 1
        {t[\delta[p(\id)]]}{t[\delta'[p(\id)]]} T
        \tag{by local substitution lemma} \\
      & \ltyequiv[\Psi][\Delta]{1}{(\Lambda g. (t[\delta[p(\id)]]))~\$~\Delta'}{t[\delta'[p(\id)]][\id_{\Psi}, \Delta'/g]}{T[\id_{\Psi}, \Delta'/g]}
    \end{align*}
    Now we have to align up the right hand side. %
    The target right hand side is
    \[
      t[\id_{\Psi}, \Delta'/g][\delta']
    \]
    We reason as follows
    \begin{align*}
      t[\delta'[p(\id)]][\id_{\Psi}, \Delta'/g]
      &= t[\id_{\Psi}, \Delta'/g][\delta'[p(\id)][\id_{\Psi}, \Delta'/g]]
        \tag{by distributivity of global substitutions} \\
      &= t[\id_{\Psi}, \Delta'/g][\delta']
        \tag{by \Cref{lem:cvar:gsubst-skip}}
    \end{align*}
    and we have the target goal. 
  \item
    \begin{mathpar}
      \inferrule
      {\ltyping{1}{t}{(g : \Ctx) \STo T}}
      {\ltyequiv{1}{t}{\Lambda g. (t[p(\id)])~\$~g}{(g : \Ctx) \STo T}}
    \end{mathpar}
    We apply the local weakening lemma:
    \begin{align*}
      & \ltyequiv[\Psi][\Delta] 1 {t[\delta]}{t[\delta']} T 
    \end{align*}
    On the right hand side, we have
    \begin{align*}
      \Lambda g. (t[p(\id)])~\$~g[\delta']
      & = \Lambda g. (t[p(\id)][\delta'[p(\id)]])~\$~g \\
      & = \Lambda g. (t[\delta'][p(\id)])~\$~g
        \tag{by algebraic rule}
    \end{align*}
    
  \end{itemize}
\end{proof}

Next, we consider the global substitution lemma:
\begin{lemma}[Global Substitutions] $ $
  \begin{itemize}
  \item If $\ltyequiv 1 t {t'} T$ and $\typing[\Phi]\sigma\Psi$, then
    $\ltyequiv[\Phi][\Gamma[\sigma]] 1{t[\sigma]}{t'[\sigma]}{T[\sigma]}$. 
  \item If $\ltyequiv 1{\delta} {\delta'}{\Delta}$ and $\typing[\Phi]\sigma\Psi$, then
    $\ltyequiv[\Phi][\Gamma[\sigma]] 1{\delta[\sigma]}{\delta'[\sigma]}{\Delta[\sigma]}$. 
  \end{itemize}
\end{lemma}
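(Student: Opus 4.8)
The plan is to prove the two statements simultaneously by mutual induction on the derivations of $\ltyequiv 1 t {t'} T$ and $\ltyequiv 1 \delta {\delta'} \Delta$, mirroring both the typing-level Global Substitutions lemma and the equivalence-level Local Substitutions lemma already established. Since the equivalence relation on global substitutions is just equality, the substitution $\sigma$ with $\typing[\Phi]\sigma\Psi$ is carried unchanged through the induction, and the only genuine work is realigning the two sides of each $\beta$ and $\eta$ rule after applying $\sigma$. Well-formedness of $\Gamma[\sigma]$, $T[\sigma]$, $\Delta[\sigma]$ and well-typedness of the various extended substitutions follow from presupposition for equivalence together with the typing-level Global Substitutions lemma, so I freely use these facts.

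For the congruence rules the argument is routine: apply the induction hypotheses to the premises and reassemble with the same congruence rule. In the $\tletbox$ and $\Lambda$ congruence cases one first transports the equivalence hypotheses through the extended global context ($\Psi, u : (\judge[\Delta] T)$ resp.\ $\Psi, g : \Ctx$) by substituting with $\sigma[p(\id)], u^{\id}/u$ resp.\ $\sigma[p(\id)], g/g$, and then shuffles the resulting nested weakenings and substitutions using \Cref{lem:cvar:gsubst-skip} and the algebraic laws relating $[\gamma]$, $[\tau]$ and $[\sigma]$ — exactly as in the typing-level proof. The $\beta$ and $\eta$ rules inherited from \citet{hu2024layered} (for natural numbers, functions, and box) are discharged as there, using the Local Substitutions lemma, the typing-level Global Substitutions lemma and the Commutativity of Global Substitutions and Local Weakenings lemma to push the computed substitution past $\sigma$.

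The interesting cases are the two new meta-function rules. For the $\beta$ rule $\ltyequiv 1 {(\Lambda g. t)~\$~\Delta}{t[\id_{\Psi}, \Delta/g]}{T[\id_{\Psi}, \Delta/g]}$, applying $\sigma$ on the left yields $(\Lambda g. (t[\sigma[p(\id)], g/g]))~\$~(\Delta[\sigma])$, to which the $\beta$ rule applies (its premises hold by the typing-level Global Substitutions lemma), giving $t[\sigma[p(\id)], g/g][\id_{\Phi}, \Delta[\sigma]/g]$; it then remains to verify
\[
  t[\sigma[p(\id)], g/g][\id_{\Phi}, \Delta[\sigma]/g] = t[\id_{\Psi}, \Delta/g][\sigma]
\]
together with its type-level analogue. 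Both are the exact computation performed in the $t~\$~\Delta$ case of the typing-level Global Substitutions lemma: compose the two global substitutions, use \Cref{lem:cvar:gsubst-skip} to discard the $g$-binding from $\sigma[p(\id)]$, and finish with the right identity $\sigma \circ \id_{\Phi} = \sigma$ on the left and the left identity $\id_{\Psi} \circ \sigma = \sigma$ on the right. For the $\eta$ rule $\ltyequiv 1 t {\Lambda g. (t[p(\id)])~\$~g}{(g : \Ctx) \STo T}$, applying $\sigma$ on the right and using $t[p(\id)][\sigma[p(\id)], g/g] = t[\sigma][p(\id)]$ (again \Cref{lem:cvar:gsubst-skip} together with the algebraic laws for global weakenings and substitutions) rewrites it to $\Lambda g. (t[\sigma][p(\id)])~\$~g$, which is precisely the $\eta$-expansion of $t[\sigma]$.

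The main obstacle is the same one that already appears at the typing level: the bookkeeping in the meta-function $\beta$ case, i.e.\ verifying the displayed equation and its type-level twin in terms of composition of global substitutions, \Cref{lem:cvar:gsubst-skip}, and the identity laws. Everything else is a mechanical lift of the corresponding typing-level argument, so I expect this lemma — like its mechanized counterpart — to inherit essentially all of its difficulty from the typing case, and I would reuse the same generalizations and leave the fully detailed calculations to the Agda development.
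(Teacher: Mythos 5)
Your proposal is correct and follows essentially the same approach as the paper: both note that global substitutions need no separate equivalence judgment so the same $\sigma$ is applied to both sides, both proceed by mutual induction with routine congruence cases, and both isolate the meta-function $\beta$/$\eta$ rules as the interesting cases, resolving them with the identical chain of computations (composition of global substitutions, \Cref{lem:cvar:gsubst-skip}, and the left/right identity laws). The level of detail you give on the $\beta$ and $\eta$ bookkeeping matches the paper's own calculation almost line-for-line.
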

\begin{proof}
  As we have noted above, there is no need for an equivalence judgment between two
  global substitutions. %
  Therefore we directly apply the same $\sigma$ on both sides. %
  Let us consider the $\beta$ and $\eta$ rules for meta-functions:
  \begin{itemize}[label=Case]
  \item
    \begin{mathpar}
      \inferrule
      {\ljudge[\Psi] 1 \Gamma \\ \ltyping[\Psi, g: \Ctx][\Gamma[p(\id)]] 1 t T \\ \ljudge[\Psi] 0 \Delta'}
      {\ltyequiv{1}{(\Lambda g. t)~\$~\Delta'}{t[\id_{\Psi}, \Delta'/g]}{T[\id_{\Psi}, \Delta'/g]}}
    \end{mathpar}
    \begin{align*}
      & \typing[\Phi, g : \Ctx]{\sigma[p(\id)], g/g}{\Psi, g : \Ctx} \\
      & \ltyping[\Phi, g: \Ctx][\Gamma[p(\id)][\sigma[p(\id)], g/g]] 1
        {t[\sigma[p(\id)], g/g]}{T[\sigma[p(\id)], g/g]}
        \tag{by global substitution lemma} \\
      & \ltyping[\Phi, g: \Ctx][\Gamma[\sigma][p(\id)]] 1
        {t[\sigma[p(\id)], g/g]}{T[\sigma[p(\id)], g/g]} \\
      &  \ljudge[\Phi] 0 {\Delta'[\sigma]}\\
      & \ltyequiv[\Phi][\Gamma[\sigma]]{1}{(\Lambda g. (t[\sigma[p(\id)], g/g]))~\$~(\Delta'[\sigma])}{t[\sigma[p(\id)], g/g][\id_{\Psi}, \Delta'[\sigma]/g]}{T[\sigma[p(\id)], g/g][\id_{\Psi}, \Delta'[\sigma]/g]}
    \end{align*}
    We now reason a few equations:
    \begin{align*}
      (\Lambda g. (t[\sigma[p(\id)], g/g]))~\$~(\Delta'[\sigma])
      = (\Lambda g. t)~\$~\Delta' [\sigma]
    \end{align*}
    and
    \begin{align*}
      t[\sigma[p(\id)], g/g][\id_{\Psi}, \Delta'[\sigma]/g]
      & = t[(\sigma[p(\id)], g/g) \circ (\id_{\Psi}, \Delta'[\sigma]/g)]  \\
      & = t[\sigma[p(\id)] \circ (\id_{\Psi}, \Delta'[\sigma]/g), \Delta'[\sigma]/g]  \\
      & = t[\sigma \circ \id_{\Psi}, \Delta'[\sigma]/g]  \\
      & = t[\id_{\Phi} \circ \sigma, \Delta'[\sigma]/g]  \\
      & = t[\id_{\Phi}, \Delta'/g][\sigma] 
    \end{align*}
    Similarly, we have
    \begin{align*}
      T[\sigma[p(\id)], g/g][\id_{\Psi}, \Delta'[\sigma]/g]
      = T[\id_{\Phi}, \Delta'/g][\sigma]
    \end{align*}
    This concludes the goal. 
  \item
    \begin{mathpar}
      \inferrule
      {\ltyping{1}{t}{(g : \Ctx) \STo T}}
      {\ltyequiv{1}{t}{\Lambda g. (t[p(\id)])~\$~g}{(g : \Ctx) \STo T}}
    \end{mathpar}
    \begin{align*}
      & \ltyping[\Phi][\Gamma[\sigma]]{1}{t[\sigma]}{(g : \Ctx) \STo T[\sigma]} \\
      & \ltyping[\Phi][\Gamma[\sigma]]{1}{t[\sigma]}{(g : \Ctx) \STo (T[\sigma[p(\id)],g/g])} \\
      & \ltyequiv[\Phi][\Gamma[\sigma]]{1}{t[\sigma]}{\Lambda g. t[\sigma][p(\id)]~\$~g}{(g : \Ctx) \STo (T[\sigma[p(\id)],g/g])} 
    \end{align*}
    We reason the right hand side
    \begin{align*}
      \Lambda g. (t[p(\id)])~\$~g[\sigma]
      &= \Lambda g. (t[p(\id)][\sigma[p(\id)], g/g])~\$~g \\
      &= \Lambda g. (t[\sigma[p(\id)]])~\$~g \\
      &= \Lambda g. (t[\sigma][p(\id)])~\$~g
    \end{align*}
    and hence establish the goal.
  \end{itemize}
\end{proof}

\subsection{Weak Head Reduction}

To compute the decide whether two terms are equivalence based on the typing rules
above, we take the approach of reducibility candidates. %
This approach requires a reduction strategy, and then we use a type directed
convertibility checking to determine whether two normal forms after reduction are
equivalent. %
In this approach, it is sufficient to have a term reduced to weak head normal forms,
specified as below:
\begin{alignat*}{2}
  w &:= &&\ v \sep \ze \sep \su t \sep \boxit t \sep \lambda x. t \sep \Lambda g. t
  \tag{Weak head normal form ($\Nf$)} \\
  v &:= &&\ x \sep u^\delta \sep v\ t \sep \letbox u v t \sep v~\$~\Gamma \tag{Neutral form
    ($\Ne$)}
\end{alignat*}
Basically the one-step weak head reduction simply takes the $\beta$ rules from the
equivalence relations and adds head reduction:

Head reductions:
\begin{mathpar}
  \inferrule*
  {\ltyred{t}{t'}{S \func T} \\ \ltyping 1{s}{S}}
  {\ltyred{t\ s}{t'\ s}{T}}

  \inferrule*
  {\ltyred {s}{s'}{\square T} \\ \ltyping[\Psi, u : T][\Gamma[p(\id)]] 1 {t}{T'[p(\id)]}}
  {\ltyred {\letbox u s t}{\letbox u{s'}{t}}{T'}}

  \inferrule*
  {\ltyred{t}{t'}{(g : \Ctx) \STo T} \\ \ljudge[\Psi] 0 \Delta}
  {\ltyred{t~\$~\Delta}{t'~\$~\Delta}{T[\id_{\Psi}, \Delta/g]}}
\end{mathpar}

$\beta$ reductions:
\begin{mathpar}
  \inferrule*
  {\ltyping[\Psi][\Gamma, x : S] 1 t T \\ \ltyping 1 s S}
  {\ltyred{(\lambda x. t)\ s}{t[\id_{\Gamma}, s/x]}{T}}

  \inferrule*
  {\ljudge[\Psi] 1 \Gamma \\ \ltyping[\Psi, g: \Ctx][\Gamma[p(\id)]] 1 t T \\ \ljudge[\Psi] 0 \Delta}
  {\ltyred{(\Lambda g. t)~\$~\Delta}{t[\id_{\Psi}, \Delta/g]}{T[\id_{\Psi}, \Delta/g]}}

  \inferrule*
  {\ltyping[\Psi][\cdot] 0 s T \\ \ljudge 1 \Gamma \\ \ljudge 1 T' \\ \ltyping[\Psi, u : T][\Gamma[p(\id)]] 1 {t}{T'[p(\id)]}}
  {\ltyred{\letbox u {\boxit s} t}{t[\id_{\Psi}, s/u]}{T'}}
\end{mathpar}

Notice that weak head reduction only occurs at layer $1$, so we do not need to rewrite
down the layer explicitly. %
This is because $\beta$ reduction only occurs at layer $1$ and all terms at layer $0$
are identified by their exact syntactic structure. %
We define the reflexive transitive closure $\ltyreds t{t'} T$ of weak head reduction
in the usual way. %
In an implementation, we would repeatedly do one-step weak head reduction, until it
reaches a normal form. %
To compare two terms of the same type, we would first compute the weak head normal
forms of both sides, and then based on their type, we perform a type-directed
convertibility check. %
This check also performs $\eta$ expansion when necessary, so the resulting algorithm
is complete w.r.t. the equivalence relation. %

Notice that the weak head reduction relation is a subrelation of the equivalence
above. %
Therefore, several theorems are simply carried over. %
We omit the proofs and only state the theorems below:
\begin{lemma}[Local Weakenings]
  If $\ltyred[\Psi][\Delta] t{t'} T$ and $\tau : \Psi; \Gamma \To_1 \Delta$,
    then $\ltyred {t[\tau]}{t'[\tau]} T$.
\end{lemma}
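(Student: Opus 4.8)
The plan is to induct on the derivation of $\ltyred[\Psi][\Delta]{t}{t'}{T}$, following the same pattern already used for the analogous statement about the equivalence judgment. Note that although weak head reduction is a subrelation of $\approx$, this lemma does \emph{not} follow from the equivalence version by transporting along that inclusion --- that would only give $t[\tau] \approx t'[\tau]$, not a one-step reduction --- so the induction is genuinely needed. The reduction relation consists of exactly the three head-reduction rules (application, $\tletbox$, meta-application) and the three $\beta$-rules ($\lambda$, $\Lambda$, and $\tletbox$--$\tbox$), so there are only six cases, and in each the strategy is uniform: unfold the action of the local weakening $\tau$ on the redex using the definition of $\cdot[\tau]$ on terms, rebuild the (now weakened) redex together with its side-condition premises, and reapply the matching rule; in the $\beta$-cases one additionally checks that the weakened reduct coincides with the reduct of the weakened redex.

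For the three head-reduction rules the case is pure congruence. For instance, from $\ltyred[\Psi][\Delta]{t\ s}{t'\ s}{T}$ built out of $\ltyred[\Psi][\Delta]{t}{t'}{S \func T}$ and $\ltyping[\Psi][\Delta]{1}{s}{S}$, the weakening unfolds to $(t\ s)[\tau] = (t[\tau])\ (s[\tau])$ and likewise for $t'\ s$; the induction hypothesis gives $\ltyred[\Psi][\Gamma]{t[\tau]}{t'[\tau]}{S \func T}$, the Local Weakenings lemma for typing gives $\ltyping[\Psi][\Gamma]{1}{s[\tau]}{S}$, and the head rule for application reapplies. The $\tletbox$ and meta-application rules are identical, with one bookkeeping wrinkle: in the $\tletbox$ rule the non-reducing subterm $t$ is typed over an extended global context $\Psi, u : (\judge[\Delta']{T})$, so before invoking the typing lemma we first use that global weakenings do not affect local weakenings in order to regard $\tau$ as a local weakening over that extended global context (with its domain and codomain themselves globally weakened).

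For the $\beta$-rules we rebuild the redex as above and then identify the two reducts. In the $\lambda$-case, $((\lambda x. t)\ s)[\tau] = (\lambda x. (t[q(\tau)]))\ (s[\tau])$ reduces to $t[q(\tau)][\id_{\Gamma}, s[\tau]/x]$, whereas the target reduct is $(t[\id_{\Delta}, s/x])[\tau]$; by the algebra of local weakenings ($t[\delta][\tau] = t[\delta[\tau]]$) the latter rewrites to $t[\id_{\Delta}[\tau], s[\tau]/x]$, so it remains to verify the weakening--substitution swap $t[q(\tau)][\id_{\Gamma}, s[\tau]/x] = t[\id_{\Delta}[\tau], s[\tau]/x]$, which is routine from the algebraic laws for local weakenings and substitutions together with \Cref{lem:cvar:loc-id}. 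In the $\Lambda$-case, $((\Lambda g. t)~\$~\Theta)[\tau] = (\Lambda g. (t[\tau]))~\$~\Theta$ reduces to $(t[\tau])[\id_{\Psi}, \Theta/g]$, while the target reduct is $(t[\id_{\Psi}, \Theta/g])[\tau]$; these agree by the Commutativity of Global Substitutions and Local Weakenings lemma, $t[\tau][\sigma] = t[\sigma][\tau]$. The $\tletbox$--$\tbox$ $\beta$-case is the same, with reduct $t[\id_{\Psi}, s/u]$, again using that commutativity lemma, plus $(\boxit{s})[\tau] = \boxit{s}$ so that the weakened redex is still formed from a value.

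The only genuine work is the $\lambda$-$\beta$ case: getting the weakening--substitution swap right while keeping the domain and codomain local contexts straight. Every identity used there has already been established among the algebraic laws for weakenings and substitutions, so nothing new is required and the lemma indeed ``carries over'' as claimed; in each case the typing premises of the reassembled redex are re-derived from the Local Weakenings lemma for typing, which also supplies presupposition.
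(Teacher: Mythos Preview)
Your proposal is correct and matches what the paper intends: the paper omits the proof entirely, noting only that since weak head reduction is a subrelation of equivalence these theorems ``carry over'', which---as you rightly observe---means the \emph{proof} carries over (an induction on the reduction derivation), not the \emph{statement}. Your six-case induction with the substitution/weakening algebra in the $\beta$-cases is exactly the argument the paper elides.
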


\begin{lemma}[Global Weakenings]
  If $\ltyred[\Phi] t{t'} T$ and $\gamma : \Psi \To_g \Phi$,
  then $\ltyred[\Psi][\Gamma[\gamma]] {t[\gamma]}{t'[\gamma]}{T[\gamma]}$.
\end{lemma}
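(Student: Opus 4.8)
The plan is to induct on the derivation of $\ltyred[\Phi] t{t'} T$. As the text observes, weak head reduction is just the head (congruence) rules together with the $\beta$ rules of the equivalence relation, so this proof is essentially the restriction to those rules of the (omitted) global-weakening lemma for $\approx$, and in turn it parallels the typing case \Cref{lem:cvar:glob-wk-tm}. For the three head-reduction rules — congruence on application, on $\tletbox$, and on meta-application — I would apply the induction hypothesis to the reduced subterm, discharge the side premises with the global-weakening lemmas for well-formedness (\Cref{lem:cvar:glob-wk-ty}) and typing (\Cref{lem:cvar:glob-wk-tm}), and reassemble with the same constructor; in the meta-application rule the reported type $T[\id, \Delta/g]$ is transported across $[\gamma]$ by the Naturality lemma, giving $T[\id_{\Phi}, \Delta/g][\gamma] = T[q(\gamma)][\id_{\Psi}, \Delta[\gamma]/g]$, exactly as in \Cref{lem:cvar:glob-wk-tm}.

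The substance is in the three $\beta$ cases, where the reduct is formed by a substitution and I must check it commutes with $[\gamma]$. For $(\lambda x. t)\ s \redd t[\id_{\Gamma}, s/x]$, the $\beta$ rule applied to the weakened redex yields $t[\gamma][\id_{\Gamma[\gamma]}, s[\gamma]/x]$, so I need $t[\id_{\Gamma}, s/x][\gamma] = t[\gamma][\id_{\Gamma[\gamma]}, s[\gamma]/x]$; this follows from the Algebra of Local Substitutions ($t[\delta][\gamma] = (t[\gamma])[\delta[\gamma]]$) together with the fact that $[\gamma]$ distributes over a cons of a local substitution and carries $\id_{\Gamma}$ to $\id_{\Gamma[\gamma]}$. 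For $(\Lambda g. t)~\$~\Delta \redd t[\id_{\Psi}, \Delta/g]$, I need the term-level analogue of the Naturality lemma, $t[\id_{\Phi}, \Delta/g][\gamma] = t[q(\gamma)][\id_{\Psi}, \Delta[\gamma]/g]$, the reported-type obligation being the Naturality lemma verbatim. For $\letbox u {\boxit s} t \redd t[\id_{\Phi}, s/u]$, I use the Algebra of Global Substitutions ($t[\sigma][\gamma] = t[\sigma[\gamma]]$) and the composition identity $q(\gamma) \circ (\id_{\Psi}, s[\gamma]/u) = (\id_{\Phi}, s/u) \circ \gamma$, observing that $s$ lives at layer $0$ so its type is unaffected by $[\gamma]$.

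The main obstacle is the term-level naturality equation required for the $\Lambda$-$\beta$ case: unlike its type- and context-level versions it is not stated explicitly above, so I would either re-run the proof of the Naturality lemma at the term level — it uses only the algebraic laws for $q$-weakenings and for the canonical context substitution — or, following the pattern used elsewhere in this report for such bookkeeping, import it from the Agda mechanization. Aside from that, the lemma is routine: collect the induction hypotheses, rewrite reported types by the Naturality lemma and the algebraic laws, and re-apply the reduction constructors.
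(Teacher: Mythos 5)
Your proposal is correct and takes essentially the same route the paper would: the paper omits the proof with the remark that weak head reduction is a subrelation of $\approx$, by which it means the proof of the global-weakening lemma for $\approx$ restricts to the reduction rules, exactly as you reconstruct — IH plus reassembly for the head/congruence rules, and a commutation equation between $[\gamma]$ and the contractum's substitution for the $\beta$ rules, with the reported type fixed up by the Naturality lemma. Your handling of the $\lambda$-$\beta$ case via the Algebra of Local Substitutions and $\id_{\Gamma}[\gamma] = \id_{\Gamma[\gamma]}$ is right.

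One small correction to your accounting of the gap: you flag only the $\Lambda$-$\beta$ case as requiring an unstated term-level naturality equation, but the $\letbox$-$\beta$ case is in the same boat. The ``composition identity'' you invoke there, $q(\gamma) \circ (\id_{\Psi}, s[\gamma]/u) = (\id_{\Phi}, s/u) \circ \gamma$, is not literally a composition in the paper's calculus (it pairs a global \emph{weakening} with a global \emph{substitution}, and $\circ$ is defined only for two weakenings or two substitutions); read as an equation on actions, it \emph{is} the term-level naturality equation for a $u$-binding, precisely parallel to the $g$-binding version needed for $\Lambda$. The Algebra of Global Substitutions gives $t[\id_{\Phi}, s/u][\gamma] = t[\id_{\Phi}[\gamma], s[\gamma]/u]$, but relating that to $t[q(\gamma)][\id_{\Psi}, s[\gamma]/u]$ is exactly the naturality step, and neither \Cref{lem:cvar:gsubst-skip} (which only covers $q^n(p(\id))$) nor \Cref{lem:cvar:wk-p-n-tm} (which only converts $\wk^n_{\Psi}$ to $p^n(\id)$) closes it. So the equation you identify as the ``main obstacle'' is needed in both modal $\beta$ cases, not just $\Lambda$; your proposed remedies (re-run the Naturality proof at the term level, or defer to the Agda mechanization) are the right ones for both.
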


\begin{theorem}[Preservation]
  If $\ltyred t {t'} T$, then $\ltyping 1 t T$ and $\ltyping 1 {t'} T$. 
\end{theorem}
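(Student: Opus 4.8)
The plan is to induct on the derivation of $\ltyred t {t'} T$. The six rules fall into two groups: the three head-reduction (congruence) rules, whose redex contains a proper subterm that itself reduces, and the three $\beta$ rules. In all cases I will also invoke the presupposition lemma to recover any well-formedness side conditions the reduction rules do not carry explicitly — in particular the shaded premises $\ljudge 0 \Delta$ and $\ljudge 0 T$ that the $\tletbox$ typing rule needs but that the corresponding head-reduction and $\beta$ rules omit.

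For the three head-reduction rules (application, $\tletbox$, meta-application) the inductive hypothesis applied to the reduction premise gives the reducing subterm well-typed \emph{at the same type} both before and after the step. Re-applying the application / $\tletbox$ / meta-application typing rule — with the non-reducing components and their types copied from the reduction premises, and the remaining well-formedness premises supplied by presupposition — then yields $\ltyping 1 t T$ and $\ltyping 1 {t'} T$ simultaneously, at the same ambient type $T$.

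For the three $\beta$ rules, well-typedness of the redex $t$ is obtained by reading off the reduction premises and applying the relevant introduction/elimination typing rules (the $\lambda$/application, $\Lambda$/meta-application, or box/$\tletbox$ rules), again with presupposition filling the gaps; well-typedness of the reduct $t'$ is where the substitution lemmas do the work. For $(\lambda x. t)\ s \redd t[\id_\Gamma, s/x]$: presupposition gives $\ljudge 1 \Gamma$, the corollary that $\ltyping i {\id_\Gamma}\Gamma$ for well-formed $\Gamma$ gives $\ltyping 1 {\id_\Gamma}\Gamma$, the cons rule for local substitutions gives $\ltyping 1 {\id_\Gamma, s/x}{\Gamma, x : S}$, and the Local Substitutions lemma then delivers $\ltyping 1 {t[\id_\Gamma, s/x]} T$ — the result type is literally $T$ since types do not depend on the local context. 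For $(\Lambda g. t)~\$~\Delta \redd t[\id_\Psi, \Delta/g]$ and $\letbox u {\boxit s} t \redd t[\id_\Psi, s/u]$ I would build the global substitutions $\typing[\Psi]{\id_\Psi, \Delta/g}{\Psi, g:\Ctx}$ and $\typing[\Psi]{\id_\Psi, s/u}{\Psi, u:(\judge[\cdot] T)}$ from the corollary $\typing[\Psi]{\id_\Psi}{\Psi}$ and the appropriate cons rule for global substitutions (using $\ltyping[\Psi][\cdot] 0 s T$ for the term premise in the $\tletbox$ case), apply the Global Substitutions lemma, and finally realign: $\Gamma[p(\id)]$ composed with $\id_\Psi, \Delta/g$ (resp. $\id_\Psi, s/u$) collapses back to $\Gamma$, and $T'[p(\id)]$ composed with $\id_\Psi, s/u$ collapses to $T'$, by \Cref{lem:cvar:gsubst-skip} together with the composition lemmas for global substitutions and the identity laws.

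The argument is essentially routine; the only step needing care is that last realignment — tracking how the $p(\id)$ weakenings the $\tletbox$ and $\Lambda$ typing rules insert into the local (and, for $T'$, the type-level) context cancel against the identity part of the substituted global substitution — together with the discipline of invoking presupposition to reconstitute the well-formedness premises that the reduction rules drop. Everything required (the local and global substitution lemmas, the $\id_\Gamma$ and $\id_\Psi$ corollaries, and \Cref{lem:cvar:gsubst-skip}) is already available, so no genuine obstacle arises.
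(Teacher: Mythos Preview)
Your direct induction is correct, and the bookkeeping you describe (rebuilding typing derivations via the typing rules and the substitution lemmas, with presupposition supplying the shaded premises) is exactly what would be needed.  The paper, however, does not do this.  Instead it observes in one sentence that weak head reduction is a subrelation of the equivalence judgment $\ltyequiv 1 t{t'}T$, and then preservation is read off immediately from the already-proved presupposition lemma for equivalence (if $\ltyequiv 1 t{t'}T$ then $\ltyping 1 t T$ and $\ltyping 1{t'}T$).  Your approach spells out, for each reduction rule, the same case analysis that was already carried out once in the presupposition proof for equivalence; the paper's route avoids this duplication entirely at the cost of first noting the inclusion $\redd\;\subseteq\;\approx$.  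Both are fine; the paper's is shorter because it factors through a lemma you already have.
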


\begin{lemma}[Local Substitutions]
  If $\ltyred t {t'} T$ and $\ltyping[\Psi][\Delta] 1 \delta \Gamma$, then
  $\ltyred[\Psi][\Delta] {t[\delta]}{t'[\delta]} T$. 
\end{lemma}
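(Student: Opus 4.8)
The plan is to induct on the derivation of the weak head reduction $\ltyred t{t'}T$, following the same pattern as the local substitution lemma for the full equivalence judgment. A useful preliminary observation is that a local substitution never acts on a type, so $T[\delta]=T$: the target type of the conclusion is literally unchanged, and only the subject and the contractum need to be tracked. All well-typedness premises that come up will be supplied by the Preservation theorem, the already-established local substitution lemma for typing, and occasionally the global weakening lemma for typing (\Cref{lem:cvar:glob-wk-tm}).

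For the three head reduction rules --- the congruences for application, $\tletbox$, and meta-application --- I would apply the induction hypothesis to the reducing premise (whose type is again a type, hence unaffected by $\delta$) and then rebuild the same rule, transporting the remaining side conditions with the local substitution lemma for typing; premises such as $\ljudge[\Psi] 0 {\Delta'}$ in the meta-application case are untouched since they live at layer $0$ over an unchanged $\Psi$. The one place that needs a little care is the $\tletbox$ head rule, where the continuation $t$ is typed in the globally weakened context $\Psi, u:T; \Gamma[p(\id)]$: there I would first globally weaken $\delta$ by $p(\id)$ via \Cref{lem:cvar:glob-wk-tm} to obtain a local substitution $\delta[p(\id)] : \Psi, u:T; \Delta[p(\id)] \To_1 \Gamma[p(\id)]$, then apply the local substitution lemma for typing to get $\ltyping[\Psi, u:T][\Delta[p(\id)]] 1 {t[\delta[p(\id)]]}{T'[p(\id)]}$, which is exactly the missing premise to reassemble $\letbox u {s[\delta]}{(t[\delta[p(\id)]])}$; by the defining clauses of $[\delta]$ this term is precisely $(\letbox u s t)[\delta]$.

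For the three $\beta$ rules the real content is checking that pushing $\delta$ inward and then contracting reproduces $t'[\delta]$. Pushing $\delta$ in is governed by the defining clauses of $[\delta]$, notably $(\boxit s)[\delta]=\boxit s$ and the fact that the argument subterms $s$ and $\Delta'$ stay at layer $0$. In the $\lambda$ case the contractum becomes $(t[q(\delta)])[\id_\Delta, s[\delta]/x]$ with $q(\delta)=\delta[\id;p(\id)], x/x$; by composition and associativity of local substitutions this equals $t[q(\delta)\circ(\id_\Delta, s[\delta]/x)]$, and I would use the local skip lemma (\Cref{lem:cvar:lsubst-skip}) together with the identity laws (\Cref{lem:cvar:loc-id}) to collapse the composite to $\delta, s[\delta]/x$, which is also what $(t[\id_\Gamma, s/x])[\delta]$ contracts to. In the meta-function case the contractum is $(t[\delta[p(\id)]])[\id_\Psi, \Delta'/g]$; the distributivity lemma for global substitutions rewrites this as $t[\id_\Psi, \Delta'/g][\delta[p(\id)][\id_\Psi, \Delta'/g]]$, and \Cref{lem:cvar:gsubst-skip} simplifies $\delta[p(\id)][\id_\Psi, \Delta'/g]$ down to $\delta$, so the contractum equals $(t[\id_\Psi, \Delta'/g])[\delta]$; the corresponding equation on the type is trivial because types ignore $\delta$.

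The $\tletbox$ $\beta$ rule is where I expect the main obstacle, even though the recipe is the same: its contractum $t[\id_\Psi, s/u]$ is a \emph{global} substitution, and I must commute it past the \emph{local} substitution $\delta$. After $\delta$ is pushed in, the left-hand side reads $\letbox u {\boxit s}{(t[\delta[p(\id)]])}$, which contracts to $(t[\delta[p(\id)]])[\id_\Psi, s/u]$; by the distributivity lemma for global substitutions this is $t[\id_\Psi, s/u][\delta[p(\id)][\id_\Psi, s/u]]$, and \Cref{lem:cvar:gsubst-skip} kills the skipped $u$-binding so that $\delta[p(\id)][\id_\Psi, s/u]=\delta$, giving $(t[\id_\Psi, s/u])[\delta]$ as wanted. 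Beyond that I expect everything to be routine; the only genuine friction is bookkeeping --- keeping straight which $\id$ and $\wk$ are in play and the $p(\id)$ weakenings introduced by $\tletbox$ and $\Lambda$.
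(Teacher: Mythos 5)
Your proof is correct, and the paper in fact omits the proof of this lemma entirely, remarking only that "the weak head reduction relation is a subrelation of the equivalence above. Therefore, several theorems are simply carried over." Your proposal supplies the omitted argument and does so in the way the paper intends: an induction on the reduction derivation, mirroring the local-substitution lemma for equivalence and reusing the same commutation and skip lemmas (\Cref{lem:cvar:lsubst-skip}, \Cref{lem:cvar:gsubst-skip}, \Cref{lem:cvar:loc-id}, the distributivity of global substitutions, and the global weakening lemma for the premise living in $\Psi, u:T$).

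One small remark worth making: the paper's "subrelation, hence carries over" phrasing is not, read literally, a proof. Applying the equivalence local-substitution lemma to $\ltyred t{t'}T$ only yields $\ltyequiv 1 {t[\delta]}{t'[\delta]} T$, whereas this lemma claims the strictly stronger conclusion $\ltyred[\Psi][\Delta]{t[\delta]}{t'[\delta]}T$. So the explicit re-run of the induction you sketch is genuinely required; what actually "carries over" is the set of substitution-algebra lemmas, not the statement itself. Your plan is exactly the right level of detail: the two congruence cases under binders ($\tletbox$ and meta-application) need the global weakening of $\delta$ by $p(\id)$, and the three $\beta$ cases need the computation that pushing $\delta$ inward and contracting equals contracting and then applying $\delta$, handled by distributivity of global substitutions plus \Cref{lem:cvar:gsubst-skip} for the $\tletbox$ and meta-function redexes, and by composition/identity of local substitutions plus \Cref{lem:cvar:lsubst-skip} for the $\lambda$ redex. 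No gaps.
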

\begin{lemma}[Globale Substitutions]
  If $\ltyred t {t'} T$ and $\typing[\Phi]\sigma\Psi$, then
    $\ltyred[\Phi][\Gamma[\sigma]]{t[\sigma]}{t'[\sigma]}{T[\sigma]}$. 
\end{lemma}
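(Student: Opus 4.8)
The plan is to induct on the derivation of $\ltyred t {t'} T$. Because weak head reduction is a subrelation of the equivalence judgment and each of its rules is a specialization of a rule already handled in the equivalence's Global Substitutions lemma, this proof runs in parallel with that earlier argument; it is in fact lighter, since weak head reduction is a judgment on terms only, so there is no mutually recursive clause for local substitutions to carry along. Throughout, the required well-formedness and well-typedness side conditions are read off from the premises of the rule being analyzed, supplemented by Preservation when the redex or contractum itself needs to be known well-typed.

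First I would dispatch the head-reduction (congruence) rules, for application, $\tletbox$, and meta-application. In each, exactly one premise is a reduction of a subterm; I apply the induction hypothesis to it to obtain the reduction of its $\sigma$-instance, transport the remaining premises along $\sigma$ using the Global Substitutions lemma for typing (and the corresponding statement for well-formed contexts and types), and reapply the same rule. For $\tletbox$, the non-reducing body must be transported along the extended substitution $\sigma[p(\id)], u^{\id}/u$, just as in the $\tletbox$ case of the typing Global Substitutions lemma, and the output type is simply $T'[\sigma]$. For meta-application there is one wrinkle: the reapplied rule yields the output type $T[\sigma[p(\id)], g/g][\id_{\Phi}, \Delta[\sigma]/g]$, whereas the goal demands $T[\id_{\Psi}, \Delta/g][\sigma]$; these coincide by precisely the chain of equations — distribute the composition, apply \Cref{lem:cvar:gsubst-skip}, then use right and left identity of global substitution composition — that already appears in the $t~\$~\Delta$ case of the typing Global Substitutions lemma.

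The substantive cases are the three $\beta$-rules, and each reduces to a single commutation identity between $\sigma$ and the particular single-binding substitution produced by the rule. For $(\lambda x. t)\ s \rightsquigarrow t[\id_{\Gamma}, s/x]$, applying $\sigma$ to the redex gives $(\lambda x. (t[\sigma]))\ (s[\sigma])$, which $\beta$-reduces to $(t[\sigma])[\id_{\Gamma[\sigma]}, s[\sigma]/x]$; this equals $(t[\id_{\Gamma}, s/x])[\sigma]$ by the Distributivity of Global Substitutions lemma together with $(\id_{\Gamma}, s/x)[\sigma] = (\id_{\Gamma}[\sigma]), s[\sigma]/x$ and the fact that global substitution preserves local identity substitutions, $\id_{\Gamma}[\sigma] = \id_{\Gamma[\sigma]}$ (an instance of the algebraic laws relating global substitutions with local weakening substitutions, closely allied to the Commutativity of Global Substitutions and Local Weakenings lemma). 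For $(\Lambda g. t)~\$~\Delta \rightsquigarrow t[\id_{\Psi}, \Delta/g]$, applying $\sigma$ gives $(\Lambda g. (t[\sigma[p(\id)], g/g]))~\$~(\Delta[\sigma])$, which $\beta$-reduces to $t[\sigma[p(\id)], g/g][\id_{\Phi}, \Delta[\sigma]/g]$; that this equals $t[\id_{\Psi}, \Delta/g][\sigma]$ is exactly the equation established at the end of the $t~\$~\Delta$ case of the typing Global Substitutions lemma. For $\letbox u {\boxit s} t \rightsquigarrow t[\id_{\Psi}, s/u]$, note that $\boxit s[\sigma] = \boxit{(s[\sigma])}$, so the redex becomes $\letbox u {\boxit{(s[\sigma])}} {(t[\sigma[p(\id)], u^{\id}/u])}$, which $\beta$-reduces to $t[\sigma[p(\id)], u^{\id}/u][\id_{\Phi}, s[\sigma]/u]$; the same chain of equations (composition, \Cref{lem:cvar:gsubst-skip}, and the identity laws) used in the $\tletbox$ case of the typing Global Substitutions lemma shows this equals $t[\id_{\Psi}, s/u][\sigma]$.

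I expect the main obstacle to be bookkeeping rather than anything conceptual: every $\beta$-case hinges on one commutation identity between the global substitution $\sigma$ and the single-binding substitution the rule introduces, and each of those identities has already been assembled — modulo renaming of metavariables — inside the corresponding case of the typing Global Substitutions lemma, so the real work is to reuse those calculations verbatim and to keep the (now single-clause) induction lined up with the structure of the equivalence-judgment proof, reapplying the relevant transported typing premises at each step.
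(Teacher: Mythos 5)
Your proposal is correct and follows exactly the route the paper intends: the paper explicitly omits this proof with the remark that, since weak head reduction is a subrelation of the equivalence judgment, ``several theorems are simply carried over,'' and your induction on the reduction derivation — transporting the non-reducing premises with the typing Global Substitutions lemma in the congruence cases and reusing the commutation identities from the $\tletbox$ and $t~\$~\Delta$ cases of that lemma in the $\beta$-cases — is precisely the way to fill in that omission. You also correctly identify the one simplification (no mutual local-substitution clause) and the one genuine wrinkle (aligning $T[\sigma[p(\id)],g/g][\id_\Phi,\Delta[\sigma]/g]$ with $T[\id_\Psi,\Delta/g][\sigma]$ in the meta-application case), so nothing further is needed.
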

\begin{lemma}[Uniqueness]
  If $\ltyred t {t'} T$ and $\ltyred t {t''} T$, then $t' = t''$. 
\end{lemma}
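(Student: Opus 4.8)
The plan is to prove, by structural induction on the derivation of $\ltyred t{t'}T$, the slightly strengthened statement: for all $t''$ and $T''$, if $\ltyred t{t''}{T''}$ then $t' = t''$. Dropping the assumption that both reductions carry the same type keeps the statement closed under its own induction hypothesis and sidesteps any need for uniqueness of types; the type annotations are spectators here, playing no role in which rule applies or what the reduct is. There is no reduction relation on local substitutions, so no mutual induction is needed. The one auxiliary observation required is that weak head normal forms of the shapes $\lambda x. t_0$, $\boxit{t_0}$, and $\Lambda g. t_0$ are \emph{irreducible}: none of the six reduction rules has such a term on its left-hand side, which is immediate by inspection.

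With that in hand, I do the case analysis on the last rule of the first derivation. The only non-vacuous source shapes are $t_1~s$, $\letbox u s {t_0}$, and $t_1~\$~\Delta$; for every other shape of $t$ (namely $x$, $u^\delta$, $\ze$, $\su{t_0}$, $\lambda x. t_0$, $\boxit{t_0}$, $\Lambda g. t_0$) no rule applies and the hypothesis $\ltyred t{t'}T$ is vacuous. Take $t = t_1~s$: the first reduction is either the congruence (head) rule, giving $\ltyred{t_1}{t'_1}{S_1 \func T}$ and $t' = t'_1~s$, or the $\beta$ rule, which forces $t_1 = \lambda x. t_0$ and $t' = t_0[\id_\Gamma, s/x]$. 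If it is the head rule, then $t_1$ reduces and hence is not a $\lambda$-abstraction, so the second reduction of $t_1~s$ must also be the head rule, yielding $\ltyred{t_1}{t''_1}{S_2 \func T''}$ for some $t''_1$; applying the strengthened induction hypothesis to the sub-derivation of $t_1$ gives $t'_1 = t''_1$, hence $t' = t''$. If instead it is the $\beta$ rule, then $t_1 = \lambda x. t_0$ is irreducible, so the second reduction cannot be the head rule and must again be the $\beta$ rule, producing the identical reduct $t_0[\id_\Gamma, s/x]$. The cases $t = \letbox u s {t_0}$ (congruence reducing $s$ versus $\beta$ requiring $s = \boxit{s_0}$) and $t = t_1~\$~\Delta$ (congruence reducing $t_1$ versus $\beta$ requiring $t_1 = \Lambda g. t_0$) go through verbatim, using irreducibility of boxes and of meta-abstractions respectively.

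I do not anticipate a genuine obstacle here: this is the usual determinism proof, and the single delicate point, the mutual exclusivity of the head rule and the $\beta$ rule at each elimination form, is settled once and for all by the irreducibility-of-values observation. The alternative of proving the lemma at a fixed type would additionally require uniqueness of types together with the Preservation theorem to match up the two annotations, so the strengthened formulation is the more economical route.
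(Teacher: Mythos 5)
Your proof is correct and follows essentially the same strategy as the paper's one-line proof (``Induction on $\ltyred t {t'} T$ and analyze $\ltyred t {t''} T$''), spelled out in full with the key irreducibility-of-values observation made explicit. Your strengthening to allow the two derivations to carry different type annotations is a sensible refinement that the paper glosses over: without it, the inductive step for the head rules would silently require matching the function types in the two sub-derivations, which is where the paper is implicitly appealing to uniqueness of typing.
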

\begin{proof}
  Induction on $\ltyred t {t'} T$ and analyze $\ltyred t {t''} T$. 
\end{proof}

These conclusions can be generalized to its reflexive transitive closure. 

In the next section, we will first look into the logical relations which establish
normalization property, and then we show that this checking strategy indeed is
complete.

\section{Normalization and Convertibility}\labeledit{sec:cv:logrel}

In the previous section, we have developed the syntactic theory of context
variables and the weak head reduction relation. %
In this section, our goal is to show the termination of the weak head reduction,
i.e. normalization, develop a type-directed convertibility checking algorithm, and
show that this algorithm is sound and complete w.r.t. the equivalence of terms. %
In the first step to normalization, we establish the reducibility predicate. %
This predicate is defined inductively and then recursively in order to describe the
semantic well-formedness of types and terms. %
Then from the reducibility predicate, we prove the semantic soundness of the model and
prove the normalization property as a corollary. %
Next, we give the convertibility algorithm as a judgment. %
This judgment will be shown equivalent to the equivalence of terms, which wraps up
our discussion on simple types.

\subsection{Generic Equivalence}

We follow \citet{abel_decidability_2017} to define a modular generic equivalence to ease
subsequent proof constructions. %
Since we do not have type-level computation for this almost simply typed theory, we
only have to be concerned about two kinds of equivalence over terms:
$\ltygteq t {t'} T$ describes a generic equivalence between two terms, and
$\ltygneeq v {v'} T$ describes a generic equivalence between two neutral terms. %
Furthermore, $\ltygteq t {t'} T$ is generalized to $\ltygteq{\delta}{\delta'}\Delta$
by recursion on $\Delta$, which denotes a generic equivalence between local
substitutions and the base cases are just congruence. %
The definition is
\begin{mathpar}
  \inferrule
  {\ljudge 1 \Gamma \\ \text{$\Gamma$ ends with $\cdot$} \\ |\Gamma| = m}
  {\ltygteq{\cdot^m}{\cdot^m}{\cdot}}

  \inferrule
  {\ljudge 1 \Gamma \\ g : \Ctx \in \Psi \\ \text{$\Gamma$ ends with $g$} \\ |\Gamma| = m}
  {\ltygteq{\cdot_g^m}{\cdot_g^m}{\cdot}}

  \inferrule
  {\ljudge 1 \Gamma \\ g : \Ctx \in \Psi \\ \text{$\Gamma$ ends with $g$} \\ |\Gamma| = m}
  {\ltygteq{\wk_g^m}{\wk_g^m}{g}}

  \inferrule
  {\ltygteq{\delta}{\delta'}{\Delta} \\ \ltygteq{t}{t'}T}
  {\ltygteq{\delta, t/x}{\delta',t'/x}{\Delta, x: T}}
\end{mathpar}

The generic equivalence will be instantiated twice, once for each layer. %
Following the layering principle, the laws for layer $0$ are subsumed by those for
layer $1$. %
The reason why we also need an instantiation for layer $0$ is that code from layer $0$
can be lifted by $\tletbox$ and become a program, so that its computation is
recovered. %
Therefore, a logical relation is needed to capture its computational behavior. %

These two relations must satisfy the following laws. %
Law statements for $\square$ and meta-functions only apply when the generic
equivalence is indexed by layer $1$. 

\begin{law}[Subsumption]$ $
  \begin{itemize}
  \item If $\ltygneeq v {v'} T$, then $\ltygteq v {v'} T$.
  \item If $\ltygteq t {t'} T$, then $\ltyequiv 1 t {t'} T$.
  \end{itemize}
\end{law}

From the subsumption of generic equivalence of terms, we have the subsumption of
generic equivalence of local substitutions as a lemma:
\begin{lemma}[Subsumption]
  If $\ltygteq \delta {\delta'} \Delta$, then $\ltyequiv 1 \delta {\delta'} \Delta$.  
\end{lemma}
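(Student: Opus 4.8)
The plan is to prove this by induction on the derivation of $\ltygteq \delta {\delta'} \Delta$, which by the definition of the generic equivalence on local substitutions is the same as recursion on the structure of $\Delta$. In each case, the rule that derived $\ltygteq \delta {\delta'} \Delta$ has exactly the shape of one of the congruence rules of the syntactic equivalence judgment $\ltyequiv 1 \delta {\delta'} \Delta$, so the goal follows by re-applying the matching congruence rule once its premises have been discharged.

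For the three base cases $\delta = \delta' = \cdot^m$, $\delta = \delta' = \cdot_g^m$, and $\delta = \delta' = \wk_g^m$, the premises carried by the generic-equivalence rule---$\ljudge 1 \Gamma$, the membership $g : \Ctx \in \Psi$ where relevant, the side condition that $\Gamma$ ends with $\cdot$ respectively with $g$, and $|\Gamma| = m$---are precisely the premises required by the corresponding congruence rule of the syntactic equivalence (equivalently, reflexivity on a well-typed local substitution). So the conclusion is immediate.

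For the cons case we have $\delta = \delta_0, t/x$ and $\delta' = \delta_0', t'/x$ with $\Delta = \Delta_0, x : T$ and subderivations $\ltygteq {\delta_0}{\delta_0'}{\Delta_0}$ and $\ltygteq t {t'} T$. The induction hypothesis applied to the first subderivation gives $\ltyequiv 1 {\delta_0}{\delta_0'}{\Delta_0}$, and the second clause of the Subsumption law, applied to the term component, gives $\ltyequiv 1 t {t'} T$. Applying the congruence rule for the substitution-extension constructor then yields $\ltyequiv 1 {\delta_0, t/x}{\delta_0', t'/x}{\Delta_0, x : T}$, as required.

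There is essentially no obstacle here: the only things to confirm are that the syntactic equivalence judgment for local substitutions does contain a congruence rule for every constructor---which is exactly what was stipulated when that judgment was introduced---and that the well-formedness side conditions of the generic-equivalence base rules coincide with those of the syntactic congruences, which holds by inspection. The statement and its proof are, of course, threaded through the same mutual induction that defines the two generic-equivalence relations alongside the syntactic equivalences on terms and local substitutions, but no new idea is needed beyond invoking Subsumption on the term component.
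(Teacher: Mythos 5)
Your proof is correct and matches the paper's intended argument: the paper itself only remarks that subsumption for local substitutions follows ``from the subsumption of generic equivalence of terms,'' and your induction on $\Delta$ (equivalently on the derivation of $\ltygteq \delta{\delta'}\Delta$), invoking the Subsumption law on the term component in the cons case and re-applying the corresponding congruence rule, is exactly that lifting.
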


\begin{law}[PER]
  Both relations are PERs. 
\end{law}

\begin{law}[Monotonicity]
  Given $\gamma; \tau : \Psi; \Gamma \To \Phi; \Delta$, if
  $\ltygteq[\Phi][\Delta] t {t'} T$, or $\ltygneeq[\Phi][\Delta] v {v'} T$, then
  $\ltygteq{t[\gamma;\tau]}{t'[\gamma;\tau]}{T[\gamma]}$, or
  $\ltygneeq{v[\gamma;\tau]}{v'[\gamma;\tau]}{T[\gamma]}$, respectively.
\end{law}

As a lemma, we have monotonicity generalized to local substitutions:
\begin{lemma}[Monotonicity]
  Given $\gamma; \tau : \Psi; \Gamma \To \Phi; \Delta$, if
  $\ltygteq[\Phi][\Delta] \delta {\delta'}{\Delta'}$, then
  $\ltygteq{\delta[\gamma;\tau]}{\delta'[\gamma;\tau]}{\Delta'[\gamma]}$.
\end{lemma}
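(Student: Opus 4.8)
The plan is to induct on the derivation of $\ltygteq[\Phi][\Delta] \delta {\delta'}{\Delta'}$. Since that judgment is defined by recursion on the codomain context $\Delta'$, this is really a case analysis on the shape of $\Delta'$, hence on the outermost constructor shared by $\delta$ and $\delta'$. The only substantive external ingredient is the already-established Monotonicity law for terms; everything else is bookkeeping about how $[\gamma;\tau] = [\gamma][\tau]$ acts on local substitutions.

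First I would dispatch the three base cases: when $\Delta' = \cdot$ we have $\delta = \delta' = \cdot^m$ or $\delta = \delta' = \cdot_g^m$, and when $\Delta' = g$ we have $\delta = \delta' = \wk_g^m$. Unfolding $\delta[\gamma][\tau]$ by the defining clauses, $\gamma$ leaves the constructor intact (renaming $g$ according to the name representation) while $\tau$ bumps the superscript from $m$ to $m + m'$, where $m'$ is the number of $p$-constructors in $\tau$. It then remains to re-derive the matching rule of the generic equivalence in the dual context $\Psi;\Gamma$, i.e.\ to discharge its side conditions: $\ljudge[\Psi] 1 \Gamma$; that $\Gamma$ ends with $\cdot$ (resp.\ with the renamed $g$); that $|\Gamma| = m + m'$; and, in the $g$-cases, that the renamed $g$ is bound in $\Psi$. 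All of these follow from $\gamma;\tau : \Psi;\Gamma \To \Phi;\Delta$ being a well-formed weakening: a global weakening carries $\Phi$-bindings to $\Psi$-bindings and preserves ``ends with $g$'' up to renaming; a local weakening only inserts variables, so it preserves the base of the context and inserts exactly $m' = |\Gamma| - |\Delta|$ of them, whence $|\Gamma| = m + m'$ using $|\Delta| = m$. In the $\wk$-case one also observes that $\Delta'[\gamma] = g[\gamma]$ is the renamed $g$, matching the target of the re-derived rule.

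For the cons case, write $\Delta' = \Delta'', x : T$, $\delta = \delta_0, t/x$, $\delta' = \delta_0', t'/x$, with subderivations $\ltygteq[\Phi][\Delta] {\delta_0}{\delta_0'}{\Delta''}$ and $\ltygteq[\Phi][\Delta] t {t'} T$ (recall types carry no local context, so $T[\gamma]$ is unproblematic). The induction hypothesis gives $\ltygteq[\Psi][\Gamma]{\delta_0[\gamma;\tau]}{\delta_0'[\gamma;\tau]}{\Delta''[\gamma]}$ and the Monotonicity law for terms gives $\ltygteq[\Psi][\Gamma]{t[\gamma;\tau]}{t'[\gamma;\tau]}{T[\gamma]}$; re-applying the cons rule yields
\[
  \ltygteq[\Psi][\Gamma]{\delta_0[\gamma;\tau], t[\gamma;\tau]/x}{\delta_0'[\gamma;\tau], t'[\gamma;\tau]/x}{\Delta''[\gamma], x : T[\gamma]}.
\]
I would then close by noting that the weakening clauses for a cons give $(\delta_0, t/x)[\gamma;\tau] = (\delta_0[\gamma;\tau]), (t[\gamma;\tau])/x$ and $(\Delta'', x : T)[\gamma] = \Delta''[\gamma], x : T[\gamma]$, so this derived judgment is exactly the goal.

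I expect the only real friction to be in the base cases: one must align the superscript arithmetic ($m \mapsto m + m'$) and the three side conditions on $\Gamma$ precisely with the definitions of weakening on $\cdot^m$, $\cdot_g^m$ and $\wk_g^m$, which in turn rests on the small auxiliary fact that a local weakening preserves the base of a context and that its number of $p$-constructors equals the length difference between its source and target. The cons case, by contrast, is an immediate composition of the inductive hypothesis with the term-level Monotonicity law.
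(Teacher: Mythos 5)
Your proof is correct and takes the natural (and essentially only) approach: the paper presents this lemma without proof as an immediate consequence of the Monotonicity law for terms, and your induction on the recursively-defined judgment $\ltygteq[\Phi][\Delta] \delta{\delta'}{\Delta'}$ — with base cases handled by superscript arithmetic on $\cdot^m$/$\cdot_g^m$/$\wk_g^m$ and the cons case handled by the IH together with the term-level Monotonicity law — is exactly what that gloss leaves implicit. Your auxiliary observations (local weakening preserves the base of the context, $m' = |\Gamma| - |\Delta[\gamma]| = |\Gamma| - |\Delta|$ since global weakenings preserve local-context length, and $g : \Ctx \in \Phi$ transports to $\Psi$ along $\gamma$) correctly discharge the side conditions.
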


\begin{law}[Weak head closure]
  If $\ltyreds t w T$, $\ltyreds{t'}{w'}T$ and $\ltygteq w {w'} T$, then $\ltygteq t {t'} T$.
\end{law}

\begin{law}[Congruence] $ $
  \begin{itemize}
  \item If $\ljudge[\Psi] 1 \Gamma$, then $\ltygteq \ze \ze \Nat$. 
  \item If $\ltygteq {t}{t'} \Nat$, then $\ltygteq{\su t}{\su t'} \Nat$.
  \item If $\ltyping 1 t {S \func T}$, $\ltyping 1 {t'}{S \func T}$ and
    $\ltygteq[\Psi][\Gamma, x : S]{t[\id;p(\id)]~x}{t'[\id;p(\id)]~x}{T}$, then
    $\ltygteq{t}{t'}{S \func T}$.
    
  \item If $\ljudge[\Psi] 1 \Gamma$ and $\ltyping[\Psi][\Delta] 0 t {T}$,
    then $\ltygteq{\boxit t}{\boxit t}{T}$.
    
  \item If $\ltyping 1 t {(g : \Ctx) \STo T}$, $\ltyping 1 {t'}{(g : \Ctx) \STo T}$ and
    $\ltygteq[\Psi, g : \Ctx]{t[p(\id)]~\$~g}{t'[p(\id)]~\$~g}{T}$, then
    $\ltygteq{t}{t'}{(g : \Ctx) \STo T}$.
  \end{itemize}
\end{law}

\begin{law}[Congruence of neutrals]$ $
  \begin{itemize}
  \item If $\ltyping 1 x T$, then $\ltygneeq x x T$. 
  \item If $u : (\judge[\Delta] T) \in \Psi$ and $\ltygteq{\delta}{\delta'}\Delta$, then
    $\ltygneeq {u^{\delta}}{u^{\delta'}} T$.
  \item If $\ltygneeq v {v'}{S\func T}$ and $\ltygteq{t}{t'}S$, then
    $\ltygneeq{v~t}{v~t'}T$. 
  \item If $\ljudge[\Psi] 1 {T'}$, $\ltygneeq{v}{v'}{\cont[\Delta]T}$ and
    $\ltygteq[\Psi, u : (\judge[\Delta] T)]{t}{t'}{T'[p(\id)]}$, then
    $\ltygneeq{\letbox u v t}{\letbox u{v'}{t'}}{T'}$.
  \item If $\ltygneeq{v}{v'}{(g : \Ctx) \STo T}$ and $\ljudge[\Psi] 0 \Delta$,
    then $\ltygneeq{v~\$~\Delta}{v'~\$~\Delta}{T[\id_\Psi, \Delta/g]}$
  \end{itemize}
\end{law}

From the congruence of local variables, we have that local weakening substitutions,
specifically, local identity substitutions, are reflexive in the generic equivalence:
\begin{lemma}[Reflexivity of Local Weakening Substitutions]
  If $\ljudge[\Psi] 1 \Delta, \Gamma$, then
  $\ltygteq[\Psi][\Delta,
  \Gamma]{\wk^{|\Gamma|}_{\Delta}}{\wk^{|\Gamma|}_{\Delta}}{\Delta}$.
\end{lemma}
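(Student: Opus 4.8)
The plan is to proceed by induction on the domain local context $\Delta$, following exactly the recursive structure of the definition of $\wk^{|\Gamma|}_{\Delta}$. The argument mirrors the proof of the earlier lemma on typing of local weakening substitutions, with every appeal to a typing rule replaced by an appeal to the corresponding congruence rule of the generic equivalence on local substitutions, and with the variable typing rule replaced by \textbf{Law (Congruence of neutrals)} composed with \textbf{Law (Subsumption)}. Throughout, the hypothesis $\ljudge[\Psi] 1 {\Delta, \Gamma}$ is threaded unchanged (and inverted where a component is needed), since in each inductive step the well-formedness premise required by the induction hypothesis is literally the same judgment.

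First I would handle the base cases. When $\Delta = \cdot$, we have $\wk^{|\Gamma|}_{\cdot} = \cdot^{|\Gamma|}$, and $\cdot, \Gamma = \Gamma$ ends with $\cdot$ with $|\cdot, \Gamma| = |\Gamma|$, so the first base-case rule for the generic equivalence of local substitutions applies directly, its side conditions discharged from $\ljudge[\Psi] 1 {\Gamma}$ and simple arithmetic, yielding $\ltygteq[\Psi][\cdot, \Gamma]{\cdot^{|\Gamma|}}{\cdot^{|\Gamma|}}{\cdot}$. The case $\Delta = g$ is analogous: $\wk^{|\Gamma|}_{g} = \wk_g^{|\Gamma|}$, the context $g, \Gamma$ ends with $g$ and has length $|\Gamma|$, and inverting $\ljudge[\Psi] 1 {g, \Gamma}$ gives $g : \Ctx \in \Psi$, so the third base-case rule applies and gives $\ltygteq[\Psi][g, \Gamma]{\wk_g^{|\Gamma|}}{\wk_g^{|\Gamma|}}{g}$.

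For the cons case $\Delta = \Delta', x : T$ I would unfold $\wk^{|\Gamma|}_{\Delta', x : T} = \wk^{1 + |\Gamma|}_{\Delta'}, x/x$ and apply the cons congruence rule. Its left premise is $\ltygteq[\Psi][\Delta', x : T, \Gamma]{\wk^{1+|\Gamma|}_{\Delta'}}{\wk^{1+|\Gamma|}_{\Delta'}}{\Delta'}$, which is exactly the induction hypothesis instantiated with domain $\Delta'$ and extension $x : T, \Gamma$ (noting $1 + |\Gamma| = |x : T, \Gamma|$, and that the required premise $\ljudge[\Psi] 1 {\Delta', x : T, \Gamma}$ is the hypothesis). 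Its right premise $\ltygteq[\Psi][\Delta', x : T, \Gamma]{x}{x}{T}$ follows by first deriving $\ltyping[\Psi][\Delta', x : T, \Gamma] 1 x T$ from the variable rule (using the hypothesis together with $x : T \in \Delta', x : T, \Gamma$), then applying \textbf{Law (Congruence of neutrals)} to obtain $\ltygneeq[\Psi][\Delta', x : T, \Gamma]{x}{x}{T}$, and finally \textbf{Law (Subsumption)}. The only point requiring care — and it is a bookkeeping point rather than a genuine difficulty — is getting the ``ends with'' and length side conditions of the base-case rules to line up with the de Bruijn representation of $\wk^{m}_{\Delta}$, exactly as in the proof of the typing lemma; I expect no new obstacle beyond that.
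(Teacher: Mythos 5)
Your proof is correct and follows the route the paper implicitly indicates: the paper states the lemma without an explicit proof, prefacing it only with ``From the congruence of local variables,'' and your induction on $\Delta$ — using the base-case rules of the generic equivalence on local substitutions for $\cdot$ and $g$, and using the congruence-of-neutrals law for local variables plus subsumption in the cons case with the IH instantiated at the extended telescope $x : T, \Gamma$ — is exactly that argument made precise. Your observation that the remaining work is the ``ends with''/length bookkeeping is also accurate, and since local weakening does not act on types in this simply-typed fragment, the lookup $x : T \in \Delta', x : T, \Gamma$ indeed yields $T$ as needed.
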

\begin{lemma}[Reflexivity of Local Identity Substitutions]
  If $\ljudge[\Psi] 1 \Gamma$, then
  $\ltygteq[\Psi][\Gamma]{\id_{\Gamma}}{\id_{\Gamma}}{\Gamma}$.
\end{lemma}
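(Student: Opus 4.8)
The plan is to obtain this as a direct corollary of the preceding \emph{Reflexivity of Local Weakening Substitutions} lemma. First I would unfold two definitional facts: $\id_{\Gamma}$ is by definition $\wk^0_{\Gamma}$, and the empty local context $\cdot$ has length $0$, so $\wk^{|\cdot|}_{\Gamma} = \wk^0_{\Gamma} = \id_{\Gamma}$. Second, appending the empty context on the right is the identity on contexts, so the hypothesis $\ljudge[\Psi] 1 \Gamma$ is literally $\ljudge[\Psi] 1 {\Gamma, \cdot}$.

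Then I would instantiate the Reflexivity of Local Weakening Substitutions lemma, taking the role of its ``base'' context $\Delta$ to be our $\Gamma$, and the role of its ``tail'' context to be $\cdot$. Its premise $\ljudge[\Psi] 1 {\Delta, \Gamma}$ becomes $\ljudge[\Psi] 1 {\Gamma, \cdot}$, which is exactly our hypothesis, and its conclusion becomes $\ltygteq[\Psi][\Gamma, \cdot]{\wk^{|\cdot|}_{\Gamma}}{\wk^{|\cdot|}_{\Gamma}}{\Gamma}$. By the two observations above this is precisely $\ltygteq[\Psi][\Gamma]{\id_{\Gamma}}{\id_{\Gamma}}{\Gamma}$, as desired.

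There is essentially no obstacle; the only points requiring minor care are the renaming of metavariables (the preceding lemma uses $\Delta$ for what is $\Gamma$ here and $\Gamma$ for the empty tail) and the observation that $\wk^0_{\Gamma}$ unfolds definitionally to $\id_{\Gamma}$. I note in passing that a naive direct induction on $\Gamma$ does \emph{not} close: the base cases $\Gamma = \cdot$ and $\Gamma = g$ follow from the first and third base-case rules of the generic equivalence for local substitutions (using $g : \Ctx \in \Psi$ extracted from the well-formedness hypothesis), but in the extension case $\Gamma = \Gamma', x : T$ one has $\id_{\Gamma', x : T} = \wk^{1}_{\Gamma'}, x/x$ rather than $\id_{\Gamma'}, x/x$, so an induction hypothesis stated only for $\id_{\Gamma'}$ is too weak. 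This is exactly why the generalization to arbitrary $\wk^{m}$ is the one proved first, and why routing this corollary through that lemma is the clean approach.
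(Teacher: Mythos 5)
Your proposal is correct and matches the paper's intent: the paper states the result immediately after the Reflexivity of Local Weakening Substitutions lemma with no explicit proof, treating it as the $\Gamma := \cdot$, $\Delta := \Gamma$ specialization, exactly as you unfold it via $\id_{\Gamma} = \wk^0_{\Gamma}$ and $|\cdot| = 0$. Your closing remark about why a naive structural induction on $\Gamma$ fails (the cons case produces $\wk^1_{\Gamma'}, x/x$ rather than $\id_{\Gamma'}, x/x$) is a sound observation that correctly explains why the generalization to $\wk^m$ was proved first.
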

This further implies
\begin{lemma}[Congruence of Global Variables]
  If $\vdash \Psi$ and $u : (\judge T) \in \Psi$, then
  $\ltygneeq {u^{\id_{\Gamma}}}{u^{\id_{\Gamma}}} T$.
\end{lemma}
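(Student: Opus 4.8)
The statement is a direct corollary of the two results immediately preceding it. By the global-variable clause of Law~(Congruence of neutrals), to conclude $\ltygneeq{u^{\id_{\Gamma}}}{u^{\id_{\Gamma}}}{T}$ from $u : (\judge T) \in \Psi$ it suffices to produce a proof of $\ltygteq{\id_{\Gamma}}{\id_{\Gamma}}{\Gamma}$, and this is exactly the content of the lemma ``Reflexivity of Local Identity Substitutions''---provided we can discharge its side condition $\ljudge[\Psi] 1 \Gamma$. So the entire proof reduces to establishing well-formedness of $\Gamma$ in the ambient global context $\Psi$, after which everything is a single rule application.

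To obtain $\ljudge[\Psi] 1 \Gamma$ I would first argue $\ljudge[\Psi] 0 \Gamma$ (the companion fact $\ljudge[\Psi] 0 T$ comes along for free but is not needed here) by a straightforward induction on the derivation of $u : (\judge T) \in \Psi$, following the two membership rules. In the base case $\Psi = \Psi', u : (\judge[\Gamma_0] T_0)$, inversion on $\vdash \Psi$ gives $\ljudge[\Psi'] 0 \Gamma_0$; since here $\Gamma = \Gamma_0[p(\id)]$ and $p(\id) : \Psi', u : (\judge[\Gamma_0] T_0) \To_g \Psi'$, \Cref{lem:cvar:glob-wk-ty} transports this to $\ljudge[\Psi] 0 \Gamma$. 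In the step case $\Psi = \Psi', u' : B'$ the induction hypothesis supplies well-formedness in $\Psi'$, and \Cref{lem:cvar:glob-wk-ty} again moves it along $p(\id)$. Finally, the Lifting lemma for local contexts turns $\ljudge[\Psi] 0 \Gamma$ into $\ljudge[\Psi] 1 \Gamma$.

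Assembling the pieces: with $\ljudge[\Psi] 1 \Gamma$ in hand, ``Reflexivity of Local Identity Substitutions'' yields $\ltygteq[\Psi][\Gamma]{\id_{\Gamma}}{\id_{\Gamma}}{\Gamma}$, and instantiating the global-variable clause of Law~(Congruence of neutrals) with $\delta = \delta' = \id_{\Gamma}$ and $\Delta = \Gamma$ gives $\ltygneeq{u^{\id_{\Gamma}}}{u^{\id_{\Gamma}}}{T}$, as desired. The only mildly delicate step is the inversion on the membership judgment through the two weakening rules for $\in$; this is purely bookkeeping with \Cref{lem:cvar:glob-wk-ty}, and in particular no reasoning about the logical relation or about termination enters, since both the reflexivity lemma and the congruence law are already in place. (If an inversion lemma stating ``$u : (\judge[\Delta] T) \in \Psi$ and $\vdash \Psi$ imply $\ljudge[\Psi] 0 \Delta$ and $\ljudge[\Psi] 0 T$'' has already been recorded, the induction in the second paragraph can be replaced by a single appeal to it.)
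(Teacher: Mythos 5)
Your proof is correct and follows exactly the route the paper implicitly intends: the paper prefaces the lemma with ``This further implies,'' treating it as an immediate consequence of Reflexivity of Local Identity Substitutions combined with the global-variable clause of the Congruence of Neutrals law. You fill in the one detail the paper elides---establishing $\ljudge[\Psi] 1 \Gamma$ via inversion on $\vdash \Psi$, transport along the membership rules using \Cref{lem:cvar:glob-wk-ty}, and Lifting---and that filling-in is accurate.
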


We give the first instantiation of both relations as follows:
\begin{itemize}
\item $\ltygteq t {t'} T := \ltyequiv 1 t {t'} T$, and
\item $\ltygneeq v {v'} T := \ltyequiv 1 v {v'} T$.
\end{itemize}
The laws are instantiated to appropriate rules. %
Later, we will instantiate the relations to algorithmic equivalence, showing that the
algorithmic rules are complete, following \citet{abel_decidability_2017}. %
Before that, let us first define the reducibility predicates parameterized by the
generic equivalence relations for terms. 

\subsection{Reducibility Predicates}

Following~\citet{abel_decidability_2017}, we first give the semantic well-formedness
of types. %
The predicates do not need to be defined recursive-inductively, because unlike
dependent types, there is no type-level computation here in our system. %
\begin{mathpar}
  \inferrule
  {\vdash \Psi}
  {\lsemjudge i \Nat}

  \inferrule
  {\forall \gamma : \Phi \To_g \Psi ~.~ \lsemjudge[\Phi] i{S[\gamma]} \\
    \forall \gamma : \Phi \To_g \Psi ~.~ \lsemjudge[\Phi] i{T[\gamma]}}
  {\lsemjudge i {S \func T}}

  \inferrule
  {\ljudge 0 \Delta \\ \forall \gamma : \Phi \To_g \Psi ~.~ \lsemjudge[\Phi] 0 {T[\gamma]}}
  {\lsemjudge 1 {\cont[\Delta] T}}
  \qquad
  \inferrule
  {\ljudge[\Psi, g : \Ctx] 1 T \\
    \forall \gamma : \Phi \To_g \Psi \tand \ljudge[\Phi] 0 \Gamma ~.~ \lsemjudge[\Phi]
    1{T[q(\gamma)][\id_\Phi, \Gamma/g]}}
  {\lsemjudge 1 {(g : \Ctx) \STo T}}
\end{mathpar}
Compared to the syntactic well-formedness judgment in~\Cref{sec:cv:ctx-types}, the
semantic counterpart has extra universal quantifications over global weakenings. %
These universal quantifications are necessary when we give semantics to terms. %
Moreover, the semantic well-formedness of types is monotonic w.r.t. global weakenings:
\begin{lemma}[Monotonicity]
  If $\gamma : \Phi \To_g \Psi$ and $\lsemjudge i T$, then $\lsemjudge[\Phi] i
  {T[\gamma]}$. 
\end{lemma}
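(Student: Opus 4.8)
The plan is to induct on the derivation of $\lsemjudge i T$; since the four defining rules have mutually exclusive conclusions fixed by the head constructor of $T$, this is really a case analysis on the shape of $T$, and---because the semantic predicate already quantifies universally over global weakenings in every clause---each case is discharged by \emph{re-indexing} those quantifiers along $\gamma$ rather than by invoking an induction hypothesis. The algebraic ingredients I will use repeatedly are: the algebra of global weakenings ($T[\gamma][\gamma'] = T[\gamma \circ \gamma']$ for types, the analogous law for local contexts, and the composition rule $q(\gamma) \circ q(\gamma') = q(\gamma \circ \gamma')$); the routine fact that weakening composition is well-formed, so that $\gamma : \Phi \To_g \Psi$ together with $\gamma' : \Phi' \To_g \Phi$ gives $\gamma \circ \gamma' : \Phi' \To_g \Psi$, and likewise $q(\gamma) : \Phi, g : \Ctx \To_g \Psi, g : \Ctx$; the Global weakenings lemma for syntactic well-formedness (\Cref{lem:cvar:glob-wk-ty}); and presupposition, so that $\gamma : \Phi \To_g \Psi$ yields $\vdash \Phi$.

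For $T = \Nat$ we have $\Nat[\gamma] = \Nat$, and $\lsemjudge[\Phi] i \Nat$ follows from $\vdash \Phi$ obtained by presupposition on $\gamma$. For $T = S \func T'$ we have $(S \func T')[\gamma] = (S[\gamma]) \func (T'[\gamma])$; to apply the defining rule I must exhibit, for every $\gamma' : \Phi' \To_g \Phi$, the judgments $\lsemjudge[\Phi'] i {S[\gamma][\gamma']}$ and $\lsemjudge[\Phi'] i {T'[\gamma][\gamma']}$. Rewriting $S[\gamma][\gamma'] = S[\gamma \circ \gamma']$ and observing $\gamma \circ \gamma' : \Phi' \To_g \Psi$, these are precisely instances, at $\gamma \circ \gamma'$, of the two universally quantified hypotheses carried by $\lsemjudge i {S \func T'}$. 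The case $T = \cont[\Delta]{T'}$ (at layer $1$) is analogous: $(\cont[\Delta]{T'})[\gamma] = \cont[\Delta[\gamma]]{T'[\gamma]}$, the side condition $\ljudge[\Phi] 0 {\Delta[\gamma]}$ comes from $\ljudge[\Psi] 0 \Delta$ via \Cref{lem:cvar:glob-wk-ty}, and the quantified premise $\lsemjudge[\Phi'] 0 {T'[\gamma][\gamma']} = \lsemjudge[\Phi'] 0 {T'[\gamma \circ \gamma']}$ is again an instance of the hypothesis at $\gamma \circ \gamma'$.

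The remaining case, $T = (g : \Ctx) \STo T'$, is the one carrying the real bookkeeping. Here $((g : \Ctx) \STo T')[\gamma] = (g : \Ctx) \STo T'[q(\gamma)]$; the side condition $\ljudge[\Phi, g : \Ctx] 1 {T'[q(\gamma)]}$ follows from the premise $\ljudge[\Psi, g : \Ctx] 1 {T'}$ and $q(\gamma) : \Phi, g : \Ctx \To_g \Psi, g : \Ctx$ by \Cref{lem:cvar:glob-wk-ty}. For the quantified premise, fix $\gamma' : \Phi' \To_g \Phi$ and $\ljudge[\Phi'] 0 \Gamma$; I must show $\lsemjudge[\Phi'] 1 {(T'[q(\gamma)])[q(\gamma')][\id_{\Phi'}, \Gamma/g]}$. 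Using $(T'[q(\gamma)])[q(\gamma')] = T'[q(\gamma) \circ q(\gamma')] = T'[q(\gamma \circ \gamma')]$, this goal becomes $\lsemjudge[\Phi'] 1 {T'[q(\gamma \circ \gamma')][\id_{\Phi'}, \Gamma/g]}$, which is exactly the instance of the hypothesis attached to $\lsemjudge i {(g : \Ctx) \STo T'}$ at the weakening $\gamma \circ \gamma' : \Phi' \To_g \Psi$ and the context $\Gamma$. The single point that needs care is that the substituted context $\Gamma$ lives over $\Phi'$ both in the goal and in the hypothesis instance, so no naturality-style interchange of substitution and weakening is required; the composition law $q(\gamma) \circ q(\gamma') = q(\gamma \circ \gamma')$ makes the two sides match on the nose. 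I therefore do not anticipate a genuine obstacle---the proof is essentially a disciplined unfolding of the quantifiers together with the weakening algebra.
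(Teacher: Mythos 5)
Your proof is correct and is what the paper's one-word justification (``Induction'') is shorthand for: a case analysis over the four rules defining $\lsemjudge i T$, where in each case the result type is computed via the global-weakening algebra and the new Kripke obligations are discharged by re-instantiating the universally quantified premises of the original derivation at the composite weakening $\gamma \circ \gamma'$. Your observation that no genuine induction hypothesis is ever invoked---because monotonicity is ``baked in'' to the definition by the $\forall \gamma'$ quantifiers---is a correct and useful refinement of how to read the paper's proof, not a different argument; and the bookkeeping in the $(g : \Ctx) \STo T$ case, using $q(\gamma) \circ q(\gamma') = q(\gamma \circ \gamma')$ to realign the premise, is exactly right.
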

\begin{proof}
  Induction. %
\end{proof}

The lifting lemma also has a semantic counterpart:
\begin{lemma}[Lifting]
  If $\lsemjudge 0 T$, then $\lsemjudge 1 T$.
\end{lemma}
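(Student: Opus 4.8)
The plan is to proceed by induction on the derivation of $\lsemjudge 0 T$. The key observation is that the semantic well-formedness rules whose conclusions are $\lsemjudge 1 {\cont[\Delta]{T'}}$ and $\lsemjudge 1 {(g : \Ctx) \STo T'}$ are both restricted to layer $1$, so any derivation of $\lsemjudge 0 T$ must end with either the $\Nat$ rule or the function-type rule. Hence only two cases arise.

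For $T = \Nat$: the derivation of $\lsemjudge 0 \Nat$ relies only on the premise $\vdash \Psi$, and that same premise immediately gives $\lsemjudge 1 \Nat$ via the layer-$1$ $\Nat$ rule.

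For $T = S \func T'$: the premises are $\forall \gamma : \Phi \To_g \Psi ~.~ \lsemjudge[\Phi] 0 {S[\gamma]}$ and $\forall \gamma : \Phi \To_g \Psi ~.~ \lsemjudge[\Phi] 0 {T'[\gamma]}$. For each $\gamma : \Phi \To_g \Psi$ I would apply the induction hypothesis to the two sub-derivations, obtaining $\lsemjudge[\Phi] 1 {S[\gamma]}$ and $\lsemjudge[\Phi] 1 {T'[\gamma]}$, and then reassemble these families via the layer-$1$ function-type rule to conclude $\lsemjudge 1 {S \func T'}$.

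The only subtlety worth flagging is that the function-type rule has infinitary premises (quantifying over all global weakenings $\gamma$), so the ``induction on the derivation'' is really induction on an inductively defined predicate with infinitary premise sets; this is still well-founded, and the induction hypothesis is available for each of the sub-derivations indexed by $\gamma$. Beyond that I do not anticipate any real obstacle: this lemma is the semantic counterpart of the syntactic Lifting lemma for types and follows exactly the same structure, just with the quantifications over $\gamma$ threaded through.
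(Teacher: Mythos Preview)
Your proposal is correct and matches the paper's approach exactly: the paper's proof is simply ``Induction.'' You have spelled out the two cases ($\Nat$ and $S \func T$) that arise at layer $0$ and handled them correctly, and your remark about the infinitary premises is a fair clarification of what that induction means.
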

\begin{proof}
  Induction.
\end{proof}

\begin{lemma}[Escape]
  If $\lsemjudge i T$, then $\ljudge i T$. 
\end{lemma}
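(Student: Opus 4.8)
The plan is to induct on the derivation of $\lsemjudge i T$, matching each semantic formation rule against its syntactic counterpart from \Cref{sec:cv:ctx-types}. The base case $\lsemjudge i \Nat$ is immediate: its only premise is $\vdash \Psi$, which is exactly the premise of the syntactic rule for $\ljudge i \Nat$. For the inductive cases the recurring move is to instantiate the universally quantified premises of the semantic rule at the identity global weakening $\id : \Psi \To_g \Psi$ (available since $\vdash \Psi$ is presupposed by the semantic judgment), apply the induction hypothesis to the resulting subderivation, and then rewrite with the algebra of global weakenings, $T[\id] = T$ and $\Gamma[\id] = \Gamma$, to recover a hypothesis about the unweakened type.

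Concretely, for $\lsemjudge i {S \func T}$ the premises are $\forall \gamma : \Phi \To_g \Psi.\ \lsemjudge[\Phi] i {S[\gamma]}$ and the analogous statement for $T$; instantiating both at $\Phi := \Psi$, $\gamma := \id$ and applying the induction hypothesis gives $\ljudge i {S[\id]}$ and $\ljudge i {T[\id]}$, i.e.\ $\ljudge i S$ and $\ljudge i T$ after the rewrite, so the syntactic arrow rule applies. For $\lsemjudge 1 {\cont[\Delta] T}$ the premise $\ljudge 0 \Delta$ is already syntactic, and instantiating the premise $\forall \gamma.\ \lsemjudge[\Phi] 0 {T[\gamma]}$ at $\id$ and applying the induction hypothesis yields $\ljudge 0 T$, so the syntactic $\square$-formation rule applies. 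The meta-function case $\lsemjudge 1 {(g : \Ctx) \STo T}$ is the easiest: its first premise is literally the hypothesis $\ljudge[\Psi, g : \Ctx] 1 T$ required by the syntactic rule, so no appeal to the induction hypothesis is needed at all.

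There is no genuine obstacle here — this is a standard ``escape'' from a semantic predicate back to its syntactic origin. The only points requiring a little care are purely bureaucratic: ensuring $\vdash \Psi$ is in scope so that $\id : \Psi \To_g \Psi$ can be used to instantiate the quantifiers (which I would record once, as a presupposition of the semantic judgment or by a short auxiliary observation), and discharging the $T[\id] = T$ / $\Gamma[\id] = \Gamma$ rewrites via the already-proved algebra of global weakenings. I would factor the repeated pattern — ``a semantic premise $\forall \gamma.\ \lsemjudge[\Phi] i {T[\gamma]}$ yields $\ljudge i T$ by taking $\gamma = \id$ and applying the IH'' — into a single sub-step used uniformly across the arrow, $\square$, and meta-function cases.
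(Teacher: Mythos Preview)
Your proposal is correct and follows the same induction as the paper. The one noteworthy difference is in the meta-function case: you read off the syntactic premise $\ljudge[\Psi, g : \Ctx] 1 T$ directly, whereas the paper instead instantiates the quantified premise at $\gamma := p(\id) : \Psi, g : \Ctx \To_g \Psi$ and $\Gamma := g$, obtaining (after the evident simplification) $\lsemjudge[\Psi, g : \Ctx] 1 T$ and then applying the induction hypothesis. Your route is shorter, since the semantic rule for $(g : \Ctx) \STo T$ already carries the syntactic well-formedness of $T$ as an explicit premise; the paper's route incidentally shows that this premise is derivable from the Kripke premise, which is mildly more informative but unnecessary for the escape lemma itself.
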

\begin{proof}
  We do induction on $\lsemjudge i T$. %
  In the meta-function case, we instantiate the global weakening to
  $p(\id) : \Psi, g : \Ctx \To_g \Psi$ and $\Gamma$ to $g$. %
  Apply IH again to obtain the goal. 
\end{proof}

Now we move on to defining the reducibility for terms $\lsemtyeq i t {t'} T$ and
local substitutions $\lsemtyeq i \delta {\delta'} \Delta$. %
Both relations are defined first by recursion on the layer $i$. %
Then, the reducibility predicate for terms $\lsemtyeq i t {t'} T$ is defined by recursion on
$\lsemjudge i T$. %
The reducibility predicate for local substitutions $\lsemtyeq i \delta {\delta'}
\Delta$ is defined inductively. %
The predicates are defined in this way because the layer-1 predicate for terms refers
to the predicates at layer $0$, as in our presheaf models by \citet{hu2024layered}. %
The definition that we present here does not focus too much on the
layered nature to save space and be more modular. %
However, we imagine that if this model must be put into a proof assistant, then
attention must be paid to ensure the well-foundness. %
The predicate for terms is a Kripke model, as it is indexed by both global and local contexts. %
We first define the semantic natural numbers:
\begin{mathpar}
  \inferrule
  {
    \ltyreds t w \Nat \\ \ltyreds{t'}{w'} \Nat \\ \ltygteq{w}{w'}\Nat \\ \dsemtyeq {^{\Nf} w}{w'} \Nat}
  {\lsemtyeq i {t}{t'}\Nat}

  \inferrule
  { }
  {\dsemtyeq{^{\Nf} \ze}\ze \Nat}

  \inferrule
  {\dsemtyeq t{t'} \Nat}
  {\dsemtyeq {^{\Nf} {\su t}}{\su t'} \Nat}

  \inferrule
  {\ltygneeq v{v'} \Nat}
  {\dsemtyeq{^{\Nf} {v}}{v'} \Nat}
\end{mathpar}
Then we let $\lsemtyeq i {t}{t'}\Nat := \dsemtyeq {t}{t'}\Nat$.

Next, we define the case for function. %
$\lsemtyeq i t {t'}{S \func T}$ holds iff
\begin{itemize}
\item $\ltyreds t w {S \func T}$, and
\item $\ltyreds{t'}{w'}{S \func T}$, and
\item $\ltygteq{w}{w'}{S \func T}$, and
\item for any $\gamma; \tau: \Phi; \Delta \To \Psi; \Gamma$, and given
  $\lsemtyeq[\Phi][\Delta] i s{s'}{S[\gamma]}$, then we have
  $\lsemtyeq[\Phi][\Delta] i{w[\gamma;\tau]~s}{w'[\gamma;\tau]~s'}{T[\gamma]}$.
\end{itemize}
It means that $t$ reduces to some weak head normal form, and the result of applying
this weak head normal form to a reducible term remains reducible. 

Next, we define the reducibility for $\cont[\Delta] T$,
\begin{mathpar}
  \inferrule
  {\ltyreds t w{\cont[\Delta] T} \\ \ltyreds{t'}{w'}{\cont[\Delta] T} \\\\ \ltygteq{w}{w'}{\cont[\Delta] T} \\ \lsemtyeq {1}{^{\Nf} w}{w'}{\cont[\Delta] T}}
  {\lsemtyeq 1{t}{t'}{\cont[\Delta] T}}
  \\
  \inferrule
  {\ltyping[\Psi][\Delta] 0 t T \\ \forall~\lsemtyeq[\Psi][\Delta'] 0{\delta}{\delta'}{\Delta} ~.~ \lsemtyeq[\Psi][\Delta'] 0{t[\delta]}{t[\delta']}{T}}
  {\lsemtyeq {1}{^{\Nf} {\boxit t}}{\boxit t}{\cont[\Delta] T}}

  \inferrule
  {\ltygneeq v{v'}{\cont[\Delta] T}}
  {\lsemtyeq {1}{^{\Nf} {v}}{v'}{\cont[\Delta] T}}
\end{mathpar}
Similar to our presheaf models, we must refer to $\lsemtyeq[\Psi][\Delta]
0{\delta}{\delta'}{\Gamma}$ when giving semantics for terms of type $\cont[\Delta] T$,
so the predicates at layer $0$ must be finished before defining the predicate at layer
$1$. %
To support pattern matching on code, instead of this universal quantification, we must
use an inductively defined layer-$0$ model instead as done in
\citet[Sec. 5.4.1]{hu2024layered}. 

Next, we define the case for meta-functions. %
$\lsemtyeq 1 t {t'}{(g : \Ctx) \STo T}$ holds iff
\begin{itemize}
\item $\ltyreds t w {(g : \Ctx) \STo T}$, and
\item $\ltyreds{t'}{w'}{(g : \Ctx) \STo T}$, and
\item $\ltygteq{w}{w'}{(g : \Ctx) \STo T}$, and
\item for any $\gamma; \tau: \Phi; \Delta \To \Psi; \Gamma$, and given
  $\ljudge[\Phi] 0{\Delta'}$, then we have
  $\lsemtyeq[\Phi][\Delta]
  1{w[\gamma;\tau]~\$~\Delta'}{w'[\gamma;\tau]~\$~\Delta'}{T[q(\gamma)][\id_\Phi, \Delta'/g]}$.
\end{itemize}

We generalize the reducibility for terms to local contexts and local substitutions by
doing an inductive-recursive definition:
\begin{mathpar}
  \inferrule
  {\vdash \Psi}
  {\lsemjudge i \cdot}
\end{mathpar}
Then $\lsemtyeq i \delta{\delta'}\cdot$ holds by considering the cases for $\Gamma$,
\begin{itemize}
\item if $\Gamma$ ends with $\cdot$, then $\delta = \delta' = \cdot^{|\Gamma|}$;
\item if $\Gamma$ ends with $g$, then $g : \Ctx \in \Psi$ and $\delta = \delta' = \cdot^{|\Gamma|}_g$.
\end{itemize}
\begin{mathpar}
  \inferrule
  {\vdash \Psi \\ g : \Ctx \in \Psi}
  {\lsemjudge i g}
\end{mathpar}
Then $\lsemtyeq i \delta{\delta'}g$ holds iff $\Gamma$ ends with $g$ and $\delta = \delta' =
\wk_g^{|\Gamma|}$. 

\begin{mathpar}
  \inferrule
  {\lsemjudge i \Delta \\ \lsemjudge i T}
  {\lsemjudge i {\Delta, x : T}}
\end{mathpar}
Then $\lsemtyeq i \delta{\delta'}{\Delta, x : T}$ holds iff
\begin{itemize}
\item $\lsemtyeq i{\delta}{\delta'}{\Delta}$,
\item $\lsemtyeq i {t}{t'}{T}$.
\end{itemize}
Note that one should consider the rules above really give two predicates, one for
layer $0$ and one for layer $1$. %
In this technical report, however, we do not really type out the replication. %

At this point, we have finished defining reducibility predicates for all types. %
We further let $\lsemtyp i t T$ be $\lsemtyeq i t t T$. %
This predicate basically means that $t$ can be reduced to some weak head normal form. 

By the definition of the predicates, we have the following lemmas:
\begin{lemma}\labeledit{lem:cvar:semred}
  If $\lsemtyeq i {t}{t'} T$, then $\ltyreds {s}{t}{T}$, and $\ltyreds{s'}{t'}{T}$, and
  $\lsemtyeq i{s}{s'}{T}$.
\end{lemma}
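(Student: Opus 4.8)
The plan is to read the statement as \emph{backward closure of reducibility under weak head reduction}: given $\lsemtyeq i t {t'} T$ together with reductions $\ltyreds s t T$ and $\ltyreds {s'}{t'}{T}$, we must derive $\lsemtyeq i s {s'} T$. The argument rests on a single structural observation about how the reducibility predicates were defined: for every type $T$, unfolding $\lsemtyeq i t {t'} T$ produces a conjunction of the shape ``$t \reds w$ and $t' \reds w'$ for some weak head normal forms $w,w'$, together with $\ltygteq w {w'} T$ and one further type-specific clause speaking only about $w$ and $w'$'' --- namely $\dsemtyeq{^{\Nf} w}{w'}\Nat$ for $\Nat$, the ``reducible argument yields reducible result'' clause for $S\func T$, and the corresponding clauses for $\cont[\Delta]T$ and $(g:\Ctx)\STo T$. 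Nothing past the two leading reduction facts ever mentions $t$ or $t'$ directly.

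Concretely, I would recurse on the layer $i$ (so the layer-$0$ instance of the lemma is available while treating layer $1$, although it turns out not to be needed) and then do case analysis on the derivation of $\lsemjudge i T$ --- equivalently on the shape of $T$ --- since $\lsemtyeq i t {t'} T$ is defined by recursion on exactly that derivation. In each of the four type cases I unfold the hypothesis to obtain the witnesses $w,w'$ with $\ltyreds t w T$, $\ltyreds {t'}{w'}{T}$, the generic equivalence $\ltygteq w {w'} T$, and the type-specific clause. I then apply transitivity of $\reds$ (the reflexive--transitive closure from the previous subsection) to combine $\ltyreds s t T$ with $\ltyreds t w T$ into $\ltyreds s w T$, and likewise $\ltyreds {s'}{w'}{T}$; every remaining component is inherited verbatim, and reassembling the data according to the defining clause of the reducibility predicate at this type gives $\lsemtyeq i s {s'} T$. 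The $\Nat$ case carries one extra layer of unfolding, since $\lsemtyeq i t {t'}\Nat$ is by definition $\dsemtyeq t {t'}\Nat$, whose single introduction rule has precisely the shape above.

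I do not expect a genuine obstacle: there is no appeal to an induction hypothesis on subtypes, because the predicates were engineered so that all of their non-reductional content lives at the weak head normal forms, which this lemma leaves untouched. The only care needed is bookkeeping --- the premises $\ltyreds s t T$ and $\ltyreds {s'}{t'}{T}$ already supply well-typedness of $s$ and $s'$ at $T$ by Preservation (in its closure form), so the side conditions in the defining clauses remain met; and the quantified clauses (e.g.\ ``for any $\gamma;\tau$ and any reducible argument $\dots$'' in the function case) are literally identical before and after prepending the reduction. Should the statement instead be intended as extracting a reducible weak head normal form --- i.e.\ that $t,t'$ themselves reduce to a reducible pair --- the very same case analysis proves it, taking $s:=w$, $s':=w'$ and using reflexivity of $\reds$ in place of transitivity.
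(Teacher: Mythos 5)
Your proposal is correct and matches the paper's approach: the paper's entire proof is "by induction on $\lsemjudge i T$," which in practice is exactly the case analysis you describe, prepending the given reductions onto the $\ltyreds t w T$ / $\ltyreds{t'}{w'}T$ witnesses by transitivity and reusing the remaining components unchanged. Your observation that no induction hypothesis is actually consumed in any case (because the type-specific clauses mention only $w,w'$) is accurate and a useful sharpening of the paper's terse proof.
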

\begin{proof}
  By induction on $\lsemjudge i T$. 
\end{proof}
\begin{lemma}
  If $\lsemtyeq i {t}{t'} T$, then $\ltyreds t w {T}$, and $\ltyreds{t'}{w'}{T}$, and
  $\ltygteq{w}{w'}{T}$.
\end{lemma}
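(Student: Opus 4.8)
The plan is to prove this directly by case analysis on $T$ — equivalently, on the derivation of $\lsemjudge i T$, which is part of the data of the hypothesis, since $\lsemtyeq i t {t'} T$ is defined by recursion on $\lsemjudge i T$. The key observation is that the reducibility predicate for terms was engineered so that, at \emph{every} type former, it explicitly records a pair of weak head normal forms together with a generic equivalence between them. Hence I expect no real work beyond unfolding the definition in each case and projecting out the relevant components.

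Concretely: if $T = \Nat$, then $\lsemtyeq i t {t'} \Nat$ is, by definition, built from the premises $\ltyreds t w \Nat$, $\ltyreds {t'}{w'} \Nat$ and $\ltygteq w {w'} \Nat$ (together with the $\Nf$-level data), so the required $w$ and $w'$ are read off immediately. If $T$ is a function type, a contextual type $\cont[\Delta]{T}$, or a meta-function type $(g : \Ctx) \STo T$, the corresponding defining clause of $\lsemtyeq i t {t'} T$ again lists $\ltyreds t w T$, $\ltyreds {t'}{w'} T$ and $\ltygteq w {w'} T$ among its conjuncts (alongside the appropriate application/elimination condition, which is not used here), so the conclusion is once more immediate. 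Since the statement concerns semantic types $T$, these are all the cases; there is no local-substitution case to treat, and the layer index $i$ plays no role in the projection.

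I do not expect a genuine obstacle here: the statement is essentially part of the definition of reducibility, and this lemma — like \Cref{lem:cvar:semred} — merely repackages that definitional content into a convenient form for later use (it is what will license ``escaping'' a reducible term to a weak head normal form generically equivalent to the normal form of its partner). The only mild care required is to handle all type formers uniformly and to note that the layer-$0$ and layer-$1$ instances of the reducibility predicate are dispatched by exactly the same projection. If one wished to extend the statement to local substitutions $\lsemtyeq i \delta {\delta'} \Delta$, a short induction on $\Delta$ peeling off the cons case would be needed, but that is not required here.
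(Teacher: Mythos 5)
Your proposal is correct and takes essentially the same route as the paper's proof, which is simply ``by induction on $\lsemjudge i T$'' --- in each of the four cases ($\Nat$, $S \func T$, $\cont[\Delta]T$, $(g : \Ctx) \STo T$) the three required conjuncts $\ltyreds t w T$, $\ltyreds{t'}{w'}T$, $\ltygteq w{w'}T$ are literal components of the defining clause of $\lsemtyeq i t{t'}T$, so the ``induction'' is just the case analysis you describe with no genuine use of an inductive hypothesis. The paper additionally cites ``transitivity of multi-step weak head reduction,'' which your direct projection does not need; that ingredient is more naturally associated with the adjacent weak-head-expansion result (\Cref{lem:cvar:semred}) and its mention here does not affect the argument.
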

\begin{proof}
  By induction on $\lsemjudge i T$ and transitivity of multi-step weak head
  reduction. 
\end{proof}
\begin{corollary}\labeledit{lem:cvar:semeq2geneq}
  If $\lsemtyeq i {t}{t'} T$, then $\ltygteq{t}{t'}{T}$.
\end{corollary}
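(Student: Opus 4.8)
The plan is to derive this immediately from the immediately preceding lemma together with the \textbf{Weak head closure} law. The preceding lemma states that whenever $\lsemtyeq i {t}{t'} T$ holds, there exist weak head normal forms $w, w'$ with $\ltyreds t w T$, $\ltyreds{t'}{w'}{T}$, and $\ltygteq{w}{w'}{T}$. These are precisely the three hypotheses required by the \textbf{Weak head closure} law, whose conclusion is exactly the desired $\ltygteq t {t'} T$.

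Concretely, I would first invoke the preceding lemma on the hypothesis $\lsemtyeq i {t}{t'} T$ to obtain $w$ and $w'$ together with the two reduction sequences and the generic equivalence $\ltygteq{w}{w'}{T}$ of the weak head normal forms. Then I would apply the \textbf{Weak head closure} law to $\ltyreds t w T$, $\ltyreds{t'}{w'}T$, and $\ltygteq{w}{w'}T$ to conclude $\ltygteq{t}{t'}{T}$.

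There is essentially no obstacle here: no induction is needed, and the corollary is a direct composition of two results already available. The only subtlety worth a sentence of care is that the \textbf{Weak head closure} law is stated at layer $1$ for the generic equivalence $\ltygteq{\cdot}{\cdot}{\cdot}$, but since the layer-$0$ laws are subsumed by the layer-$1$ laws (as noted when the generic equivalence is introduced), the argument goes through uniformly for both $i = 0$ and $i = 1$.

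\begin{proof}
  Suppose $\lsemtyeq i {t}{t'} T$. By the previous lemma there exist $w, w'$ with $\ltyreds t w T$, $\ltyreds{t'}{w'}{T}$, and $\ltygteq{w}{w'}{T}$. The claim $\ltygteq{t}{t'}{T}$ now follows immediately from the \textbf{Weak head closure} law.
\end{proof}
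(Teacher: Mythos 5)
Your proof is correct and matches the intended argument: the paper presents this as a corollary precisely because it follows by composing the immediately preceding lemma with the weak head closure law, exactly as you do.
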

\begin{lemma}\labeledit{lem:cvar:semeq2geneq-lsubst}
  If $\lsemtyeq i {\delta}{\delta'} \Delta$, then
  $\ltygteq{\delta}{\delta'}{\Delta}$. 
\end{lemma}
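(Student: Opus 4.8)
The plan is to prove the statement by structural induction on $\Delta$, following the inductive definition of the reducibility predicate $\lsemtyeq i \delta {\delta'} \Delta$ for local substitutions (equivalently, induction on $\lsemjudge i \Delta$). In every case I will unfold the reducibility predicate, observe that its shape coincides with exactly one congruence rule of the generic equivalence on local substitutions, and apply that rule.

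In the two base cases $\lsemjudge i \cdot$ and $\lsemjudge i g$, the reducibility predicate forces $\delta = \delta'$ and fixes it to one of the atomic forms $\cdot^m$, $\cdot^m_g$, or $\wk^m_g$, subject to the side conditions that $\Gamma$ ends with $\cdot$ (resp.\ $g$), that $|\Gamma| = m$, and, in the context-variable cases, that $g : \Ctx \in \Psi$. These side conditions are exactly the premises of the corresponding congruence rules $\ltygteq{\cdot^m}{\cdot^m}{\cdot}$, $\ltygteq{\cdot_g^m}{\cdot_g^m}{\cdot}$, and $\ltygteq{\wk_g^m}{\wk_g^m}{g}$, so the conclusion is immediate. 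The one premise not literally recorded in the reducibility predicate is $\ljudge 1 \Gamma$; I would obtain it from the fact that syntactic well-formedness is carried everywhere by the Kripke model, or, if needed, by first establishing the auxiliary escape fact that $\lsemtyeq i \delta {\delta'} \Delta$ implies $\ltyping i \delta \Delta$ (whence $\ljudge i \Gamma$, and then $\ljudge 1 \Gamma$ by lifting).

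For the inductive case $\lsemjudge i {\Delta_0, x : T}$, inverting the predicate yields $\delta = \delta_0, t/x$ and $\delta' = \delta_0', t'/x$ with $\lsemtyeq i {\delta_0}{\delta_0'}{\Delta_0}$ and $\lsemtyeq i {t}{t'}{T}$ (and $\lsemjudge i T$, $\lsemjudge i {\Delta_0}$ from inverting $\lsemjudge i {\Delta_0, x : T}$). The induction hypothesis gives $\ltygteq{\delta_0}{\delta_0'}{\Delta_0}$, and \Cref{lem:cvar:semeq2geneq} applied to $\lsemtyeq i {t}{t'}{T}$ gives $\ltygteq{t}{t'}{T}$; feeding both into the cons congruence rule for local substitutions yields $\ltygteq{\delta_0, t/x}{\delta_0', t'/x}{\Delta_0, x : T}$, which is the goal. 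No step here is genuinely hard — the lemma is essentially bookkeeping, transporting \Cref{lem:cvar:semeq2geneq} through two inductive definitions whose clauses match up one for one. The only point that needs care is confirming that the base cases of the reducibility predicate retain enough syntactic data (notably $\ljudge 1 \Gamma$) to discharge the premises of the generic-equivalence congruence rules; everything else is routine.
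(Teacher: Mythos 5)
Your proof is correct and is exactly the detailed version of the paper's one-line argument (``generalization of \Cref{lem:cvar:semeq2geneq}''): induction on $\lsemjudge i \Delta$, matching each clause of the reducibility predicate for local substitutions with the corresponding congruence rule of the generic equivalence, and invoking \Cref{lem:cvar:semeq2geneq} at the cons step. The one caveat is that your second suggested route to $\ljudge 1 \Gamma$ — going through an escape lemma from reducibility to typing — would be circular here, since the paper's escape lemma for local substitutions is proved downstream via subsumption, which depends on the present lemma; your first observation, that well-formedness of $\Gamma$ is carried as part of the Kripke indexing of the model, is the one to rely on.
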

\begin{proof}
  Generalization of \Cref{lem:cvar:semeq2geneq}. 
\end{proof}

\begin{lemma}[Reducibility of Neutrals]\labeledit{lem:cvar:red-ne}
  If $\lsemjudge i T$, $\ltyping i v T$ and $\ltygneeq v {v'} T$, then $\lsemtyeq i v {v'} T$. 
\end{lemma}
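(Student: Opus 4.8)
The plan is to induct on the derivation of $\lsemjudge i T$, following the usual pattern for reflection lemmas in a reducibility argument. Throughout, the crucial elementary observation is that a neutral term is already a weak head normal form, so $\ltyreds v v T$ and $\ltyreds{v'}{v'}T$ hold vacuously (zero reduction steps); moreover $\ltygteq v {v'} T$ follows from the hypothesis $\ltygneeq v {v'} T$ by the Subsumption law, and whatever well-typedness side conditions appear (e.g.\ for $v'$) can be recovered from presupposition of the generic equivalence. With these in hand, the cases $T = \Nat$ and $T = \cont[\Delta]{T'}$ (the latter only when $i = 1$) are immediate: after supplying the reductions and the generic equivalence of the weak head normal forms, the only remaining obligation is the ``neutral'' clause of the reducibility predicate for that type --- $\dsemtyeq{^{\Nf} v}{v'}\Nat$, respectively $\lsemtyeq 1{^{\Nf} v}{v'}{\cont[\Delta]{T'}}$ --- and each of those is exactly the corresponding introduction rule of the predicate applied to the hypothesis $\ltygneeq v {v'} T$.

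The real work is in the cases $T = S \func T'$ and $T = (g : \Ctx) \STo T'$ (both at $i = 1$, and $S \func T'$ also at $i = 0$). Take $S \func T'$. Again $v$ and $v'$ are their own weak head normal forms and $\ltygteq v {v'}{S \func T'}$ holds by Subsumption, so it suffices to verify the application clause: for every weakening $\gamma; \tau$ and every reducible $\lsemtyeq[\Phi][\Delta] i s {s'}{S[\gamma]}$ we must produce $\lsemtyeq[\Phi][\Delta] i{v[\gamma;\tau]~s}{v'[\gamma;\tau]~s'}{T'[\gamma]}$. Here $v[\gamma;\tau]~s$ is again neutral (weakening preserves neutrality, and the application of a neutral to a term is neutral), and $\lsemjudge[\Phi] i{T'[\gamma]}$ holds by the very definition of $\lsemjudge i{S \func T'}$, so the induction hypothesis applies at $T'[\gamma]$. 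Its two premises are discharged as follows: well-typedness $\ltyping[\Phi][\Delta] i{v[\gamma;\tau]~s}{T'[\gamma]}$ comes from the global and local weakening lemmas for typing, the application typing rule, and the escape of $\lsemtyeq[\Phi][\Delta] i s{s'}{S[\gamma]}$ to a typing judgment for $s$; and the generic neutral equivalence $\ltygneeq[\Phi][\Delta]{v[\gamma;\tau]~s}{v'[\gamma;\tau]~s'}{T'[\gamma]}$ follows from the Congruence-of-neutrals law applied to $\ltygneeq[\Phi][\Delta]{v[\gamma;\tau]}{v'[\gamma;\tau]}{S[\gamma] \func T'[\gamma]}$ (by Kripke monotonicity of the generic equivalence) and $\ltygteq[\Phi][\Delta] s {s'}{S[\gamma]}$ (by \Cref{lem:cvar:semeq2geneq}). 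The meta-function case $(g : \Ctx) \STo T'$ is entirely analogous, using the meta-application typing rule and the corresponding Congruence-of-neutrals clause for $v~\$~\Delta'$; one only uses additionally that $\lsemjudge[\Phi] 1{T'[q(\gamma)][\id_\Phi, \Delta'/g]}$ is part of the definition of $\lsemjudge 1{(g : \Ctx) \STo T'}$, and that the defining equation $((g : \Ctx) \STo T')[\gamma] = (g : \Ctx) \STo (T'[q(\gamma)])$ makes the type indices line up.

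I expect the main obstacle to be precisely this Kripke bookkeeping in the two function cases: the induction hypothesis can only be invoked at a \emph{weakened} (and, for meta-functions, substituted) instance of the return type, so one must thread the argument through monotonicity of semantic well-formedness of types, Kripke monotonicity of the generic equivalence, and the syntactic weakening/substitution lemmas, while keeping all the well-typedness premises of the induction hypothesis satisfied and keeping track of how weakening and substitution commute with the type formers. None of this is conceptually hard, but it is where a sloppy treatment of indices would go wrong; every other step is just reading off the appropriate clause of the reducibility predicate.
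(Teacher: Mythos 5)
Your proposal is correct and follows exactly the approach the paper intends: the paper's own proof is simply ``Induction by $\lsemjudge i T$ and apply IHs,'' and your write-up is a faithful and careful elaboration of that induction, including the key observation that neutral terms are already weak head normal, the use of Subsumption and Congruence-of-neutrals to discharge the generic-equivalence obligations, and the correct identification that the induction hypothesis is available at the weakened/substituted return type because those semantic well-formedness judgments are premises of the relevant $\lsemjudge$ rule.
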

\begin{proof}
  Induction by $\lsemjudge i T$ and apply IHs. 
\end{proof}
\begin{lemma}[Reducibility of Weakenings]
  If $\lsemjudge i \Gamma,\Delta$, then $\lsemtyeq[\Psi][\Gamma,\Delta]i {\wk^{|\Delta|}_{\Gamma}}{\wk^{|\Delta|}_{\Gamma}}{\Gamma}$.
\end{lemma}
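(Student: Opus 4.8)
The plan is to argue by induction on the codomain context $\Gamma$, keeping the appended telescope $\Delta$ general, since both the weakening substitution $\wk^{|\Delta|}_{\Gamma}$ and the reducibility predicate for local substitutions with codomain $\Gamma$ are defined by recursion on $\Gamma$. From the hypothesis $\lsemjudge i {\Gamma, \Delta}$, peeling off the bindings of $\Delta$ (inverting the clause for $\lsemjudge i {\Delta', x : T}$) yields $\lsemjudge i \Gamma$ and $\lsemjudge i {S_k}$ for every type $S_k$ occurring in $\Delta$; using the escape lemma wherever a syntactic premise is needed, we also recover $\ljudge i {\Gamma, \Delta}$, $\vdash \Psi$, and $g : \Ctx \in \Psi$ when $\Gamma$ is rooted at a context variable $g$.

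In the base case $\Gamma = \cdot$, the substitution $\wk^{|\Delta|}_{\cdot}$ unfolds to $\cdot^{|\Delta|}$, and the domain $\cdot, \Delta$ ends with $\cdot$ with $|\cdot, \Delta| = |\Delta|$, so the defining clause of $\lsemtyeq i \delta {\delta'} \cdot$ applies immediately. In the base case $\Gamma = g$, $\wk^{|\Delta|}_{g}$ unfolds to the base-case local substitution $\wk_{g}^{|\Delta|}$, the domain $g, \Delta$ ends with $g$ with length $|\Delta|$, and $g : \Ctx \in \Psi$, so the clause of $\lsemtyeq i \delta {\delta'} g$ applies. In the step case $\Gamma = \Gamma_0, x : T$, read the full context as $\Gamma_0, (x : T, \Delta)$; then $\wk^{|\Delta|}_{\Gamma_0, x : T} = \wk^{1 + |\Delta|}_{\Gamma_0}, x/x$, and by the defining clause of $\lsemtyeq i \delta {\delta'} {\Gamma_0, x : T}$ it suffices to show (i) $\lsemtyeq[\Psi][\Gamma_0, x : T, \Delta]{i}{\wk^{1 + |\Delta|}_{\Gamma_0}}{\wk^{1 + |\Delta|}_{\Gamma_0}}{\Gamma_0}$ and (ii) $\lsemtyeq[\Psi][\Gamma_0, x : T, \Delta]{i}{x}{x}{T}$. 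Part (i) is the induction hypothesis at codomain $\Gamma_0$ with the enlarged telescope $x : T, \Delta$, noting that $|x : T, \Delta| = 1 + |\Delta|$ and that $\lsemjudge i {\Gamma_0, (x : T, \Delta)}$ is literally the present hypothesis.

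For part (ii), the variable $x$ is neutral and well typed, $\ltyping[\Psi][\Gamma_0, x : T, \Delta]{i}{x}{T}$, by the variable rule from $\ljudge i {\Gamma_0, x : T, \Delta}$ and $x : T \in \Gamma_0, x : T, \Delta$; the congruence law for neutral variables then gives $\ltygneeq[\Psi][\Gamma_0, x : T, \Delta]{x}{x}{T}$, and since $\lsemjudge i T$ was extracted above, \Cref{lem:cvar:red-ne} upgrades this to $\lsemtyeq[\Psi][\Gamma_0, x : T, \Delta]{i}{x}{x}{T}$. The argument is uniform in the layer $i$, since the laws invoked at layer $0$ are subsumed by their layer-$1$ counterparts. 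Almost everything reduces to definitional unfolding plus the induction hypothesis; the only genuine input is the appeal to \Cref{lem:cvar:red-ne} in part (ii), and I expect the main care to go into the de Bruijn/telescope bookkeeping — in particular verifying that the $x/x$ slot of $\wk^{|\Delta|}_{\Gamma_0, x : T}$ really selects the variable of type $T$ at the correct position of $\Gamma_0, x : T, \Delta$.
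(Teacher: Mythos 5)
Your proof is correct and fills in the details of the paper's one-line justification, which cites \Cref{lem:cvar:red-ne} as the only essential input. You make explicit the structural induction on the codomain $\Gamma$ (with the telescope $\Delta$ universally quantified so the step case can enlarge it to $x : T, \Delta$), and you correctly identify that the two base cases are definitional unfoldings of the clause for $\lsemtyeq i \delta {\delta'} {\cdot}$ and $\lsemtyeq i \delta {\delta'} {g}$, while the step case combines the inductive hypothesis with the reducibility-of-neutrals lemma applied to the freshly bound variable. This is the same argument the paper intends; the appeal to the congruence-for-neutrals law plus \Cref{lem:cvar:red-ne} for part (ii) and the bookkeeping of $\wk^{1+|\Delta|}_{\Gamma_0}, x/x$ are exactly the substance of the claim.
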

\begin{proof}
  A direct consequence of \Cref{lem:cvar:red-ne}.
\end{proof}
\begin{corollary}[Reducibility of Identity]
  If $\lsemjudge i \Gamma$, then $\lsemtyeq i {\id_{\Gamma}}{\id_{\Gamma}}{\Gamma}$.  
\end{corollary}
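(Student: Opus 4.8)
The plan is to read the \textbf{Reducibility of Identity} corollary off the immediately preceding \textbf{Reducibility of Weakenings} lemma, exactly as the corollary \emph{Reflexivity of Local Identity Substitutions} was read off \emph{Reflexivity of Local Weakening Substitutions} earlier. First I would recall that $\id_{\Gamma}$ is by definition the local weakening substitution $\wk^{0}_{\Gamma}$, i.e.\ the $m=0$ instance of the family $\wk^{m}_{\Gamma}$. Thus the goal $\lsemtyeq i {\id_{\Gamma}}{\id_{\Gamma}}{\Gamma}$ is, unfolding notation, exactly $\lsemtyeq[\Psi][\Gamma]i{\wk^{0}_{\Gamma}}{\wk^{0}_{\Gamma}}{\Gamma}$.

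Next I would instantiate \textbf{Reducibility of Weakenings} with the extension context $\Delta := \cdot$. Its hypothesis $\lsemjudge i {\Gamma,\Delta}$ becomes $\lsemjudge i {\Gamma,\cdot}$, and since appending the empty local context is a definitional no-op we have $\Gamma,\cdot = \Gamma$, so the hypothesis is literally $\lsemjudge i \Gamma$, which is the assumption of the corollary. Its conclusion $\lsemtyeq[\Psi][\Gamma,\Delta]i{\wk^{|\Delta|}_{\Gamma}}{\wk^{|\Delta|}_{\Gamma}}{\Gamma}$ becomes $\lsemtyeq[\Psi][\Gamma,\cdot]i{\wk^{|\cdot|}_{\Gamma}}{\wk^{|\cdot|}_{\Gamma}}{\Gamma}$; using $|\cdot| = 0$ and again $\Gamma,\cdot = \Gamma$ this is precisely $\lsemtyeq[\Psi][\Gamma]i{\wk^{0}_{\Gamma}}{\wk^{0}_{\Gamma}}{\Gamma}$, i.e.\ the desired $\lsemtyeq i {\id_{\Gamma}}{\id_{\Gamma}}{\Gamma}$.

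There is essentially no obstacle here: the only thing to check is that the three definitional equalities $\Gamma,\cdot = \Gamma$, $|\cdot| = 0$, and $\wk^{0}_{\Gamma} = \id_{\Gamma}$ hold on the nose rather than merely up to the equivalence/reducibility relation, which they do by the definitions of local context concatenation, the length function, and $\id_{\Gamma}$ given in the preceding sections. So the proof is a one-line specialization, and I would simply write: apply the \textbf{Reducibility of Weakenings} lemma with $\Delta = \cdot$.
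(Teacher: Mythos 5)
Your proof is correct and is exactly the specialization the paper implicitly intends by labeling this a corollary of Reducibility of Weakenings: instantiate $\Delta := \cdot$, use $\Gamma,\cdot = \Gamma$, $|\cdot| = 0$, and $\wk^{0}_{\Gamma} = \id_{\Gamma}$, and the conclusion falls out on the nose. No gap.
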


\begin{lemma}
  If $\lsemjudge 0 T$, then $\lsemtyeq 0 {t}{t'} T$ iff $\lsemtyeq 1 {t}{t'} T$.
\end{lemma}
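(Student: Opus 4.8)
The plan is to argue by induction on the derivation of $\lsemjudge 0 T$, using the observation that the only semantic well-formedness rules concluding a layer-$0$ judgment are the one for $\Nat$ and the one for $S \func T'$ (with premises $\lsemjudge 0 S$ and $\lsemjudge 0 T'$, each quantified over global weakenings); the rules for $\cont[\Delta]{T'}$ and for meta-functions conclude only layer-$1$ judgments, so those cases do not arise. In both remaining cases one simply unfolds the two definitions $\lsemtyeq 0 t {t'} T$ and $\lsemtyeq 1 t {t'} T$ and checks they denote the same relation; note that $\lsemtyeq 1 t {t'} T$ is defined at all since $\lsemjudge 1 T$ follows from the Lifting lemma for semantic types.

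For $T = \Nat$ this is immediate: at either layer the predicate unfolds to ``$t$ and $t'$ weak-head reduce to $w$ and $w'$ with $\ltygteq w {w'}\Nat$ and $\dsemtyeq{^{\Nf} w}{w'}\Nat$'', and the auxiliary relation on weak head normal forms at $\Nat$ makes no reference to the layer.

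For $T = S \func T'$, I would unfold $\lsemtyeq i t {t'}{S \func T'}$ (uniformly in $i$) to: $\ltyreds t w {S \func T'}$, $\ltyreds{t'}{w'}{S \func T'}$, $\ltygteq w {w'}{S \func T'}$, and the Kripke condition that for every $\gamma;\tau : \Phi;\Delta \To \Psi;\Gamma$ and every $\lsemtyeq[\Phi][\Delta] i s {s'}{S[\gamma]}$ one has $\lsemtyeq[\Phi][\Delta] i {w[\gamma;\tau]~s}{w'[\gamma;\tau]~s'}{T'[\gamma]}$. The reduction and generic-equivalence conjuncts are layer-independent just as in the $\Nat$ case. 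For the Kripke conjunct, because the function rule for semantic well-formedness already quantifies its premises over all global weakenings, the induction hypothesis is directly available for $\lsemjudge[\Phi] 0 {S[\gamma]}$ and $\lsemjudge[\Phi] 0 {T'[\gamma]}$: applying it at $S[\gamma]$ turns the hypothesis $\lsemtyeq[\Phi][\Delta] 0 s {s'}{S[\gamma]}$ into its layer-$1$ form (and back), and applying it at $T'[\gamma]$ does the same for the conclusion. Hence the Kripke conditions at the two layers coincide, and therefore $\lsemtyeq 0 t {t'}{S \func T'}$ and $\lsemtyeq 1 t {t'}{S \func T'}$ are the same relation.

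The one subtlety to watch — and the closest thing to an obstacle — is the treatment of the generic equivalences $\simeq$ and $\sim$: the layer-$0$ and layer-$1$ reducibility predicates are parameterized by the corresponding layer-$0$ and layer-$1$ instantiations of these relations, so for the conjunct $\ltygteq w {w'} T$ (and for the neutral clause of the $\Nat$ normal-form relation) to match across layers one must know that the two instantiations restrict to one and the same relation on terms and neutrals whose type is well-formed at layer $0$. This is precisely the ``laws for layer $0$ are subsumed by those for layer $1$'' discipline already built into the Subsumption law; once it is invoked the rest is routine unfolding, with the function case additionally requiring the usual contravariant/covariant use of the induction hypothesis.
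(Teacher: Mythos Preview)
Your proof is correct and follows the same approach as the paper: induction on $\lsemjudge 0 T$, with only the $\Nat$ and function cases arising, the former immediate and the latter handled by applying the induction hypothesis to the domain and codomain under arbitrary weakenings.

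The one point where you overcomplicate things is the final ``subtlety'' paragraph. In the paper's setup, the generic equivalence $\simeq$ (and $\sim$) is a \emph{single} relation, not a layer-indexed family; the remark that it is ``instantiated twice, once for each layer'' means only that the \emph{laws} it must satisfy at layer~$1$ form a superset of those at layer~$0$ (the $\square$ and meta-function clauses are layer-$1$-only). Concretely, the auxiliary relation $\dsemtyeq{t}{t'}{\Nat}$ has no layer parameter at all, and the conjunct $\ltygteq w{w'}{S \func T}$ in the function case is literally the same judgment at both layers. So there is no coherence condition to discharge via Subsumption; the $\Nat$ case is immediate by definition, and the function case is pure IH, exactly as the paper's one-line proof indicates.
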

\begin{proof}
  We only consider the cases for $\Nat$ and functions and apply IHs directly. 
\end{proof}
This is derived from the fact that the predicate for $\Nat$ is invariant at
different layers. %

We verify several important properties for the reducibility predicates:
\begin{lemma}[Escape]
  If $\lsemtyeq i {t}{t'} T$, then $\ltyequiv 1 {t}{t'}T$.
\end{lemma}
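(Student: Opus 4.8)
The plan is to read this \emph{escape} property as a one-line corollary of facts already in hand, rather than a fresh induction. First I would apply \Cref{lem:cvar:semeq2geneq}, which turns the hypothesis $\lsemtyeq i {t} {t'} {T}$ into the generic equivalence $\ltygteq {t} {t'} {T}$. Then I would invoke the second clause of the Subsumption law, which says exactly that $\ltygteq {t} {t'} {T}$ implies the syntactic equivalence $\ltyequiv 1 {t} {t'} {T}$. Composing the two steps discharges the goal, with no case split on the layer $i$ and no induction on the semantic well-formedness of $T$: the induction on $\lsemjudge i T$ that one would naively attempt here has already been carried out inside \Cref{lem:cvar:semeq2geneq}, and the passage from the \emph{abstract} generic equivalence to the \emph{concrete} judgement $\ltyequiv 1 {-} {-} {-}$ is precisely what the Subsumption law guarantees, irrespective of which of the two instantiations of the generic equivalence is currently in force.

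For orientation I would also record the self-contained route that does not go through \Cref{lem:cvar:semeq2geneq}, namely a direct induction on $\lsemjudge i T$. In each case---$\Nat$, $S \func T$, $\cont[\Delta] T$, and $(g : \Ctx) \STo T$---the definition of $\lsemtyeq i {t} {t'} {T}$ bundles weak head reductions $\ltyreds {t} {w} {T}$ and $\ltyreds {t'} {w'} {T}$ together with $\ltygteq {w} {w'} {T}$. From $\ltygteq {w} {w'} {T}$ the Subsumption law yields $\ltyequiv 1 {w} {w'} {T}$; and since weak head reduction is a subrelation of $\approx$, the chains $\ltyreds {t} {w} {T}$ and $\ltyreds {t'} {w'} {T}$ are themselves chains of equivalences, so transitivity and symmetry of $\approx$ give $\ltyequiv 1 {t} {t'} {T}$. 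The recursive components of the $\func$ and $\cont$ clauses are never consulted, which is exactly why this argument is just \Cref{lem:cvar:semeq2geneq} re-derived inline; I therefore favour the two-step composition above.

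I do not anticipate a real obstacle here: the statement is bookkeeping whose substance was already paid for. The only point worth a moment's attention is confirming that the relation denoted $\ltygteq {-} {-} {-}$ inside the reducibility predicates is the same generic equivalence to which the Subsumption law applies---which holds by construction, since the reducibility predicates of \Cref{sec:cv:logrel} are parameterized by a generic equivalence satisfying exactly the stated laws. The companion statement for local substitutions, $\lsemtyeq i {\delta} {\delta'} {\Delta}$ implies $\ltyequiv 1 {\delta} {\delta'} {\Delta}$, follows in the same way from \Cref{lem:cvar:semeq2geneq-lsubst} together with the Subsumption lemma for local substitutions.
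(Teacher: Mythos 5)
Your proof is correct and matches the substance of the paper's: the paper phrases it as a fresh induction on $\lsemjudge i T$ followed by an appeal to the subsumption law, whereas you observe (rightly) that the induction has already been performed in \Cref{lem:cvar:semeq2geneq}, so the lemma reduces to composing that corollary with the second clause of subsumption. The two routes are the same proof, with yours factoring out what the paper re-derives inline; the companion statement for local substitutions via \Cref{lem:cvar:semeq2geneq-lsubst} is likewise correctly identified.
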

\begin{proof}
  We do induction on $\lsemjudge i T$. %
  Notice that generic equivalence eventually implies syntactic equivalence by the
  subsumption law. 
\end{proof}
\begin{lemma}[Escape]
  If $\lsemtyeq i {\delta}{\delta'} \Delta$, then $\ltyequiv 1 {\delta}{\delta'}\Delta$.
\end{lemma}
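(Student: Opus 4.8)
The plan is to derive this directly from the machinery already set up for local substitutions, exactly as the escape lemma for terms just above is derived from the subsumption law. First I would apply \Cref{lem:cvar:semeq2geneq-lsubst} to $\lsemtyeq i {\delta}{\delta'}{\Delta}$ to obtain the generic equivalence $\ltygteq{\delta}{\delta'}{\Delta}$. Then, since the first instantiation of the generic equivalence satisfies the subsumption law and that law has been lifted to local substitutions (the Subsumption lemma for $\ltygteq{\delta}{\delta'}{\Delta}$ stated earlier), I would conclude $\ltyequiv 1 {\delta}{\delta'}{\Delta}$, which is the goal. This is a two-step argument with no induction of its own, because the inductive work has already been discharged inside \Cref{lem:cvar:semeq2geneq-lsubst} and inside the subsumption lemma.

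If instead one wanted a self-contained proof, I would induct on the inductive definition of $\lsemtyeq i {\delta}{\delta'}{\Delta}$, i.e.\ on $\Delta$, following the shape of the equivalence judgment for local substitutions, whose only rules are the congruence rules over its constructors $\cdot_{g?}^m$, $\wk_g^m$, and $\delta, t/x$. In the cons case $\Delta = \Delta_0, x : T$ the reducibility predicate forces $\delta = \delta_0, t/x$ and $\delta' = \delta_0', t'/x$ with $\lsemtyeq i {\delta_0}{\delta_0'}{\Delta_0}$ and $\lsemtyeq i {t}{t'}{T}$; I would apply the induction hypothesis to the first and the already-proven escape lemma for terms to the second, then re-assemble with the cons congruence rule to get $\ltyequiv 1 {\delta_0, t/x}{\delta_0', t'/x}{\Delta_0, x : T}$.

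The only mildly delicate point — and the place I would be careful — is the base cases $\Delta = \cdot$ and $\Delta = g$. There the reducibility predicate branches on the shape of the \emph{domain} context $\Gamma$, and it pins $\delta$ and $\delta'$ to be syntactically equal and of the appropriate canonical form ($\cdot^{|\Gamma|}$, $\cdot_g^{|\Gamma|}$, or $\wk_g^{|\Gamma|}$), together with the side conditions $\ljudge i \Gamma$, $|\Gamma| = m$, and, where relevant, $g : \Ctx \in \Psi$ and ``$\Gamma$ ends with $g$''. These are precisely the premises of the degenerate (reflexivity) instance of the corresponding congruence rule, so the rule applies verbatim and yields $\ltyequiv 1 {\delta}{\delta}{\Delta}$; the only work is making sure the correct branch of the definition is selected and that the recorded well-formedness data lines up with what the rule demands. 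Since the two-step argument of the first paragraph sidesteps this entirely, I do not expect any genuine obstacle here.
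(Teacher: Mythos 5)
Your two-step argument is correct and matches what the paper intends: the paper leaves this escape lemma without an explicit proof, but the preceding term-level escape lemma is proved precisely by passing through generic equivalence and then subsumption, and your chain of \Cref{lem:cvar:semeq2geneq-lsubst} followed by the Subsumption lemma for local substitutions does exactly that, cleanly. The self-contained induction in your second paragraph is also sound and is, in effect, an unfolding of the inductive work already packaged inside those two lemmas.
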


\begin{lemma}[Monotonicity]
  If $\lsemjudge i T$ and $\lsemtyeq i {t}{t'} T$, given $\gamma; \tau : \Phi;\Delta
  \To \Psi; \Gamma$, then $\lsemtyeq[\Phi][\Delta] i
  {t[\gamma;\tau]}{t'[\gamma;\tau]}{T[\gamma]}$.
\end{lemma}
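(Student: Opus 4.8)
The plan is a simultaneous induction: for terms, by induction on the derivation of $\lsemjudge i T$ --- noting that $\lsemjudge[\Phi] i {T[\gamma]}$ already holds by the Monotonicity lemma for semantic well-formedness of types, so the stated conclusion is well posed --- and, in parallel, the local-substitution version, namely that if $\lsemjudge i \Delta$ and $\lsemtyeq i \delta {\delta'} \Delta$ then $\lsemtyeq[\Phi][\Delta] i {\delta[\gamma;\tau]}{\delta'[\gamma;\tau]}{\Delta[\gamma]}$, by induction on $\lsemjudge i \Delta$. It is convenient to carry the argument out at layer $0$ first and only then at layer $1$, since the layer-$0$ predicates are defined earlier and involve strictly fewer type formers; the layer-$1$ case for $\cont$ will appeal to the layer-$0$ conclusions.

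In every type case the three ``administrative'' conjuncts are discharged uniformly. After weak-head reducing $t$ to $w$ and $t'$ to $w'$, the facts $t[\gamma;\tau] \reds w[\gamma;\tau]$ and $t'[\gamma;\tau] \reds w'[\gamma;\tau]$ follow from the global- and local-weakening lemmas for $\reds$ together with the fact that $w[\gamma;\tau] = w[\gamma][\tau]$ by definition, and $\ltygteq{w[\gamma;\tau]}{w'[\gamma;\tau]}{T[\gamma]}$ follows from the Monotonicity law of the generic equivalence. The remaining work is the ``payload'' of each clause. For $T = \Nat$ the payload after reduction is $\dsemtyeq {^{\Nf} w}{w'} \Nat$, transported by a side induction on that inductive derivation: the zero case is immediate, the successor case is the inner induction hypothesis, and the neutral case is again the Monotonicity law. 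For $T = S \func T'$, given a further weakening $\gamma';\tau' : \Phi';\Delta' \To \Phi;\Delta$ and $\lsemtyeq[\Phi'][\Delta'] i s {s'}{S[\gamma][\gamma']}$, I compose to $\gamma \circ \gamma'; \tau \circ \tau' : \Phi';\Delta' \To \Psi;\Gamma$ (associativity of weakening composition), rewrite $S[\gamma][\gamma'] = S[\gamma \circ \gamma']$ and $w[\gamma;\tau][\gamma';\tau'] = w[\gamma \circ \gamma'; \tau \circ \tau']$ by the algebra of weakenings, invoke the original clause, and observe that the result sits at type $T'[\gamma \circ \gamma'] = T'[\gamma][\gamma']$. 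For $T = (g : \Ctx) \STo T_0$ (which occurs only at layer $1$) the argument has the same shape, with the codomain type carrying an extra $q(\gamma)$; after composing weakenings the types match because $q(\gamma \circ \gamma') = q(\gamma) \circ q(\gamma')$ and hence $T_0[q(\gamma)][q(\gamma')] = T_0[q(\gamma \circ \gamma')]$, so no separate naturality step is needed. For $T = \cont[\Delta_0]{T_0}$ (again layer $1$ only), a weak-head normal form of type $\cont[\Delta_0]{T_0}$ is some $\boxit{s}$ (the same $s$ on both sides), global weakening distributes into the box, and since $\Delta_0$ and $T_0$ live at layer $0$ the target type is again $\cont[\Delta_0]{T_0}$ (the no-op lemma); the typing premise of the box clause is transported by \Cref{lem:cvar:glob-wk-tm}, and the universally quantified clause over layer-$0$ reducible local substitutions is transported using the layer-$0$ induction hypotheses, for terms and for local substitutions, together with the algebra laws relating a global weakening to local substitutions. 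The local-substitution induction is short: the $\cdot$ and $g$ cases are forced (once the index bookkeeping is checked), because a reducible substitution into such a context is determined by the domain context and its trailing context variable, both preserved by the weakening; the cons case is two appeals to the induction hypothesis.

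I expect the main obstacle to be the $\cont$ case: making the layer-$0$ reducible local substitutions and the global weakening $\gamma$ commute past each other, so that the weakened body still validates the universally quantified clause, requires some care with the algebra of local substitutions and global weakenings, and is precisely why the entire lemma --- both the term and the local-substitution statements --- should be established at layer $0$ before attempting layer $1$. Everything else reduces to a mechanical combination of the weakening lemmas for $\reds$, the Monotonicity law of the generic equivalence, and the algebraic laws for local and global weakenings.
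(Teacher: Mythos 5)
Your overall plan is essentially the paper's: a mutual induction on the derivations of $\lsemjudge i T$ and $\lsemjudge i \Delta$, discharging the reduction and generic-equivalence conjuncts by the corresponding monotonicity lemmas, and composing weakenings for the function-type cases (where the key equality is $q(\gamma \circ \gamma') = q(\gamma) \circ q(\gamma')$). The $\Nat$, $S \func T'$, and $(g : \Ctx) \STo T_0$ cases are handled correctly.

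The $\cont[\Delta_0]{T_0}$ case has two problems. The smaller one: a weak-head normal form of this type is not necessarily a box --- the definition also has a neutral clause ($\ltygneeq v{v'}{\cont[\Delta_0]{T_0}}$), which you must also cover (it follows from the Monotonicity law of the generic equivalence). The more substantive one: your claim that the universally quantified premise of the box clause ``is transported using the layer-$0$ induction hypotheses'' does not work in the direction needed. The hypothesis you have quantifies over $\lsemtyeq[\Psi][\Delta'] 0{\delta}{\delta'}{\Delta_0}$; the goal requires you to handle an arbitrary $\lsemtyeq[\Phi][\Delta''] 0{\delta_1}{\delta_1'}{\Delta_0[\gamma]}$. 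Your IHs push reducibility \emph{forward} from $\Psi$-worlds to $\Phi$-worlds along $\gamma$, whereas here you would need to \emph{pull back} a $\Phi$-reducible local substitution to a $\Psi$-reducible one, and no such unweakening principle exists. For this step to be ``immediate'' (as the paper itself asserts), the box clause really needs to be read as Kripke-closed in the global direction --- quantified over all $\gamma : \Phi \To_g \Psi$ together with $\Delta', \delta, \delta'$ --- at which point transport is by composition of weakenings, not by IH. The displayed definition omits this closure, so the paper's terseness glosses over the same point, but the mechanism you name is definitely the wrong one. A smaller slip in the same passage: the no-op lemma ($T[\gamma] = T$) is stated only for layer-$0$ types, not local contexts, so the target type is $\cont[\Delta_0[\gamma]]{T_0}$, which need not be literally $\cont[\Delta_0]{T_0}$.
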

\begin{proof}
  We do induction on $\lsemjudge i T$.
  \begin{itemize}[label=Case]
  \item $T = \Nat$, then we have the goal by the monotonicity of multi-step weak head
    reduction and the generic equivalence. %
    We further do induction on $\dsemtyeq{^{\Nf} {w}}{w'} \Nat$.
    
  \item $T = S \func T'$, then we assume another $\gamma';\tau' : \Phi'; \Delta' \To
    \Phi; \Delta$ and $\lsemtyeq[\Phi'][\Delta'] i {s}{s'}{S[\gamma \circ
      \gamma']}$, we must show $\lsemtyeq[\Phi'][\Delta'] i {t[\gamma; \tau][\gamma'; \tau']~s}{t'[\gamma; \tau][\gamma'; \tau']~s'}{S[\gamma \circ
      \gamma']}$. %
    But this is immediate due to the composition of weakenings.
    
  \item $T = \cont[\Delta]{T'}$, then it is also immediate after a case analysis of
    $\dsemtyeq{^{\Nf} {w}}{w'}{\cont[\Delta]{T'}}$. %
    We apply the monotonicity of multi-step weak head
    reduction and the generic equivalence appropriately.
    
  \item $T = (g : \Ctx) \STo T'$, then this case is very similar to the function
    case. %
    We assume another $\gamma';\tau' : \Phi'; \Delta' \To
    \Phi; \Delta$ and $\ljudge[\Phi'] 0 \Gamma'$. %
    Our goal is to show
    \[
      \lsemtyeq[\Phi'][\Delta']
      1{w[\gamma;\tau][\gamma';\tau']~\$~\Gamma'}{w'[\gamma;\tau][\gamma';\tau']~\$~\Gamma'}{T[q(\gamma
        \circ \gamma')][\id_{\Phi'}, \Gamma'/g]}
    \]
    This again has been given by the composition of weakenings and notice that
    $q(\gamma \circ \gamma') = q(\gamma) \circ q(\gamma')$. 
  \end{itemize}
\end{proof}
\begin{lemma}[Monotonicity]
  If $\lsemtyeq i {\delta}{\delta'}{\Delta'}$, given $\gamma; \tau : \Phi;\Delta
  \To \Psi; \Gamma$, then $\lsemtyeq[\Phi][\Delta] i
  {\delta[\gamma;\tau]}{\delta'[\gamma;\tau]}{\Delta'[\gamma]}$.
\end{lemma}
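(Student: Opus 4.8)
The plan is to induct on the derivation of $\lsemjudge i \Delta'$ (equivalently, on the structure of $\Delta'$), since $\lsemtyeq i \delta {\delta'} \Delta'$ is defined by cases on the shape of $\Delta'$. In each case I apply the weakening $\gamma;\tau$ to the rigid form that reducibility forces on $\delta$ and $\delta'$, and check that the result is again of the form reducibility demands at $\Delta'[\gamma]$ in the new dual-context $\Phi;\Delta$. Throughout I use the algebra-of-weakenings lemmas for terms, local substitutions and contexts established earlier, together with the preceding term-level monotonicity lemma.

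I would handle the flat base cases first. If $\Delta' = \cdot$, reducibility tells us $\delta = \delta'$ is either $\cdot^{|\Gamma|}$ (when $\Gamma$ ends with $\cdot$) or $\cdot_g^{|\Gamma|}$ with $g : \Ctx \in \Psi$ (when $\Gamma$ ends with $g$). Writing $m'$ for the number of $p$-constructors in $\tau$, the algebra of weakenings gives $\cdot^{|\Gamma|}[\gamma;\tau] = \cdot^{|\Gamma|+m'}$ and $\cdot_g^{|\Gamma|}[\gamma;\tau] = \cdot_{g}^{|\Gamma|+m'}$, where in the latter $g$ is re-indexed along $\gamma$. Since $\gamma;\tau : \Phi;\Delta \To \Psi;\Gamma$ and local weakenings only perform $p$ and $q$ steps, $\Delta$ ends with the same base as $\Gamma[\gamma]$ (namely $\cdot$, respectively the re-indexed $g$ which is still a $\Ctx$-entry of $\Phi$), and $|\Delta| = |\Gamma| + m'$; moreover $\Delta'[\gamma] = \cdot$. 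Hence the weakened substitution has exactly the form that reducibility requires at $\cdot$ in $\Phi;\Delta$. The case $\Delta' = g$ is analogous: reducibility forces $\Gamma$ to end with $g$ and $\delta = \delta' = \wk_g^{|\Gamma|}$; weakening yields $\wk_{g}^{|\Delta|}$ with $g$ re-indexed, $\Delta$ ends with that re-indexed $g$, and $\Delta'[\gamma] = g$, which closes the case.

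For the cons case $\Delta' = \Delta'', x : T$, reducibility gives $\delta = (\delta_0, t/x)$ and $\delta' = (\delta_0', t'/x)$ with $\lsemjudge i T$, $\lsemtyeq i {\delta_0}{\delta_0'}{\Delta''}$, and $\lsemtyeq i {t}{t'}{T}$. I apply the induction hypothesis to the first component and the preceding term-level monotonicity lemma to the second, obtaining $\lsemtyeq[\Phi][\Delta] i {\delta_0[\gamma;\tau]}{\delta_0'[\gamma;\tau]}{\Delta''[\gamma]}$ and $\lsemtyeq[\Phi][\Delta] i {t[\gamma;\tau]}{t'[\gamma;\tau]}{T[\gamma]}$. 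Since weakenings distribute over the cons constructor of substitutions, $\delta[\gamma;\tau] = (\delta_0[\gamma;\tau], t[\gamma;\tau]/x)$ and likewise for $\delta'$, while $\Delta'[\gamma] = \Delta''[\gamma], x : T[\gamma]$; reassembling the two pieces yields reducibility at $\Delta'[\gamma]$. The only place where care is needed — the ``main obstacle'', such as it is — is the bookkeeping in the base cases: one must verify that applying $\gamma;\tau$ bumps the numeric exponent of $\cdot^m$, $\cdot_g^m$, $\wk_g^m$ by exactly the number of fresh local variables of $\Delta$ so that it equals $|\Delta|$, and that re-indexing the context variable along $\gamma$ keeps it a legitimate entry of $\Phi$ that agrees with the base of $\Delta$ and with $g[\gamma]$ occurring in $\Delta'[\gamma]$. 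These are immediate from the earlier algebraic lemmas, so no genuinely new argument is required.
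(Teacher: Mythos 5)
Your proof is correct. The paper states this lemma without a proof, treating it as an immediate generalization of the term-level monotonicity lemma; your filling-in — induction on $\lsemjudge i \Delta'$, with the base cases verified by the weakening algebra on $\cdot^m$, $\cdot^m_g$, $\wk^m_g$ (the exponent tracking $|\Delta|=|\Gamma|+m'$ and the context-variable base preserved along $\gamma$) and the cons case discharged by the inductive hypothesis together with the preceding term-level monotonicity lemma — is exactly the expected argument.
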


\begin{lemma}[PER]
  The $\lsemtyeq i {t}{t'} T$ relation satisfies symmetry and transitivity.
\end{lemma}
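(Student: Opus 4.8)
The plan is to prove symmetry and transitivity \emph{simultaneously}, by induction on the derivation of $\lsemjudge i T$ (and, in parallel, on $\lsemjudge i \Delta$, which also drives the local-substitution relation $\lsemtyeq i \delta {\delta'} \Delta$). Proving the two statements together is not optional: the function case $T = S \func T'$ and the meta-function case $T = (g : \Ctx) \STo T'$ push a reducible argument through the predicate, and to line up the inner uses of the induction hypothesis I will need \emph{both} the symmetry and the transitivity hypotheses at the structurally smaller type $S[\gamma]$. The ingredients I expect to reuse everywhere are: (i) that the generic equivalence $\ltygteq t {t'} T$ and the generic neutral equivalence $\ltygneeq v {v'} T$ are PERs (the PER law), hence symmetric and transitive; (ii) determinism of weak head reduction, i.e. the Uniqueness lemma extended to the reflexive transitive closure, so that a term has a unique weak head normal form; and (iii) the induction hypotheses for the immediate subderivations $\lsemjudge[\Phi] i {S[\gamma]}$, $\lsemjudge[\Phi] i {T[\gamma]}$, etc.

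For symmetry, given $\lsemtyeq i t {t'} T$ each case records reductions $\ltyreds t w T$, $\ltyreds{t'}{w'}T$, an equivalence $\ltygteq w {w'} T$, and some normal-form payload; I simply swap the two sides everywhere. The reductions swap for free, $\ltygteq{w'}{w}T$ is symmetry of the generic equivalence, and the payload is handled per type: for $\Nat$ by an auxiliary induction on $\dsemtyeq{^{\Nf} w}{w'}\Nat$ (the $\ze$ case trivial, the $\su t$ case recursive, the neutral case using symmetry of the generic neutral equivalence); for $\cont[\Delta]{T'}$ by observing that the $\tbox$ clause carries the \emph{same} layer-$0$ term on both sides (so it is literally self-symmetric) while the neutral clause again uses symmetry of the generic neutral equivalence; and for $\func$ / $\STo$ by first turning the argument hypothesis $\lsemtyeq[\Phi][\Delta] i s {s'}{S[\gamma]}$ into its symmetric form via the IH and then flipping the conclusion via the IH at $T[\gamma]$ (resp. $T[q(\gamma)][\id_\Phi, \Delta'/g]$).

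For transitivity, from $\lsemtyeq i t {t'} T$ and $\lsemtyeq i {t'}{t''} T$ I extract $\ltyreds t w T$, $\ltyreds{t'}{w_1}T$, $\ltyreds{t'}{w_2}T$, $\ltyreds{t''}{w''}T$; by determinism $w_1 = w_2 =: w'$, so $\ltyreds t w T$ and $\ltyreds{t''}{w''}T$ hold and $\ltygteq w {w''}T$ follows by transitivity of the generic equivalence. It then remains to combine the normal-form payloads. For $\Nat$, combine $\dsemtyeq{^{\Nf} w}{w'}\Nat$ with $\dsemtyeq{^{\Nf} w'}{w''}\Nat$ by case analysis (the two derivations must use matching rules; the neutral case uses transitivity of the generic neutral equivalence). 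For $\cont[\Delta]{T'}$, either both inner derivations are $\tbox$ clauses, in which case $w = w' = w''$ and the universally quantified side condition is copied verbatim, or both are neutral and transitivity of the generic neutral equivalence applies. For $T = S \func T'$, given $\gamma;\tau$ and $\lsemtyeq[\Phi][\Delta] i s {s'}{S[\gamma]}$, I first use the IH (symmetry then transitivity at $S[\gamma]$) to obtain $\lsemtyeq[\Phi][\Delta] i {s'}{s'}{S[\gamma]}$, instantiate the first predicate at $(s,s')$ and the second at $(s',s')$, and chain the two resulting reducibilities at $T[\gamma]$ with the IH for transitivity; the meta-function case is identical with a well-formed local context $\Delta'$ in the argument slot (no argument-side PER needed) and the IH at $T[q(\gamma)][\id_\Phi, \Delta'/g]$. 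The local-substitution statement then drops out of the parallel induction on $\lsemjudge i \Delta$: the $\cdot$ and $g$ cases force the substitutions to be unique, and the $\Delta, x : T$ case reduces to PER of the component term and substitution predicates.

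The main obstacle, as I see it, is structural rather than computational: the induction must be arranged so that symmetry and transitivity are available together at smaller types (needed for the argument positions of $\func$ and $\STo$), and in the transitive case I must identify the two weak head normal forms obtained from $t'$ — which is precisely where determinism of weak head reduction is indispensable. The one thing that looks like it might cause trouble, the universal quantification over reducible layer-$0$ substitutions inside the $\tbox$ clause of the $\cont$ case, turns out to be harmless: that clause pins both sides to the same layer-$0$ term, so the quantified side condition is carried through unchanged under both symmetry and transitivity, and no PER property of the layer-$0$ predicate is required there.
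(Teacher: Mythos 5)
Your proof is correct and takes essentially the same approach as the paper: the paper's own (one-line) proof says only that transitivity relies on uniqueness of weak head reduction, which is exactly the crux you identify, and the surrounding structure you spell out — induction on $\lsemjudge i T$, simultaneous symmetry/transitivity to feed the argument positions of $\func$ and $\STo$, PER laws for the generic equivalence, and the observation that the $\tbox$ clause of $\cont$ pins both sides to the same layer-$0$ term — is the intended fleshing-out.
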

\begin{proof}
  Transitivity relies on the uniqueness of weak head reduction. 
\end{proof}
\begin{lemma}[PER]
  The $\lsemtyeq i {\delta}{\delta'} \Delta$ relation satisfies symmetry and transitivity.
\end{lemma}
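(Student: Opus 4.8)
The plan is to mirror the proof of the immediately preceding PER lemma for terms, now arguing by induction on $\lsemjudge i \Delta$ --- equivalently, on the shape of $\Delta$ as it is dispatched in the inductive-recursive definition of the local-substitution predicate. Since $\lsemtyeq i \delta {\delta'} \Delta$ is defined by cases on $\Delta$, there are exactly three cases, and in each the two PER properties reduce either to reflexive equalities or to the already-established PER property of $\lsemtyeq i t {t'} T$.

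First I would dispatch the two base cases. When $\Delta = \cdot$, the hypothesis $\lsemtyeq i \delta {\delta'} \cdot$ forces $\delta$ and $\delta'$ to be the \emph{same} substitution, namely $\cdot^{|\Gamma|}$ or $\cdot_g^{|\Gamma|}$ according to whether the domain context $\Gamma$ ends with $\cdot$ or with a context variable $g$; the length $|\Gamma|$ and, in the second case, the variable $g$ are read off from $\Gamma$ alone, so the two candidates coincide and both symmetry and transitivity hold trivially. The case $\Delta = g$ is identical: $\lsemtyeq i \delta {\delta'} g$ forces $\delta = \delta' = \wk_g^{|\Gamma|}$, so again there is nothing to prove.

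For the inductive case $\Delta = \Delta_0, x : T$, unfolding the definition gives $\delta = \delta_0, t/x$ and $\delta' = \delta_0', t'/x$ with $\lsemtyeq i {\delta_0}{\delta_0'}{\Delta_0}$ and $\lsemtyeq i t {t'} T$. For symmetry I apply the induction hypothesis to the first component and the term-level PER lemma (symmetry, at type $T$) to the second, and reassemble to obtain $\lsemtyeq i {\delta_0', t'/x}{\delta_0, t/x}{\Delta_0, x : T}$. For transitivity, given in addition $\lsemtyeq i {\delta_0'}{\delta_0''}{\Delta_0}$ and $\lsemtyeq i {t'}{t''} T$, the induction hypothesis composes the substitution components while the term-level PER lemma composes the term components, yielding $\lsemtyeq i {\delta_0, t/x}{\delta_0'', t''/x}{\Delta_0, x : T}$.

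I do not expect a genuine obstacle here: all the nontrivial content --- in particular the appeal to uniqueness of weak head reduction needed for transitivity --- is confined to the term-level PER lemma, which we are entitled to invoke. The only point deserving a sentence of care is the observation, used in the base cases, that the data determining a base-case local substitution ($|\Gamma|$ and the trailing context variable, if any) are pinned down by $\Delta$ and $\Gamma$, so that two reducibility witnesses at those types are literally equal; this is immediate from the definition of the predicate together with the typing rules for $\cdot^m$, $\cdot_g^m$, and $\wk_g^m$.
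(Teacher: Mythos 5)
Your proof is correct; the paper actually states this lemma without a proof, treating it as the evident generalization of the preceding term-level PER lemma. Your argument --- induction on the structure of $\Delta$ (equivalently, on $\lsemjudge i \Delta$), with the observation that the base-case substitutions are forced to coincide because they are read off from $\Gamma$ alone, and with the step case delegating to the term-level PER lemma --- is precisely the expected argument.
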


The following lemma is the semantic counterpart for the layering principle. %
It ensures that terms inhabiting types in STLC have the same behaviors at both layers.
\begin{lemma}[Layering Restriction] $ $
  \begin{itemize}
  \item If $\lsemjudge 0 T$, then $\lsemtyeq 0 t{t'} T$ is equivalent to $\lsemtyeq 1 t{t'} T$.
  \item If $\lsemjudge 0 \Delta$, then $\lsemtyeq 0 \delta{\delta'} \Delta$ is
    equivalent to $\lsemtyeq 1 \delta{\delta'} \Delta$.
  \end{itemize}
\end{lemma}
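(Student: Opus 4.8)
The plan is to unfold the reducibility predicates and establish both bullets by a single mutual induction on the semantic well-formedness judgments $\lsemjudge 0 T$ and $\lsemjudge 0 \Delta$ (these are mutually defined by induction-recursion). Note that the first bullet is exactly the earlier lemma comparing the layer-$0$ and layer-$1$ predicates on $\Nat$ and function types, so the genuinely new content is the local-substitution bullet; it is cleanest to redo both at once. The organising observation is that at layer $0$ the only inhabited cases are $T = \Nat$ and $T = S \func T'$ with $\lsemjudge 0 S$, $\lsemjudge 0 {T'}$ for types, and $\Delta = \cdot$, $\Delta = g$, or $\Delta = \Delta', x : T$ with $\lsemjudge 0 {\Delta'}$, $\lsemjudge 0 T$ for local contexts: the clauses for $\cont[\Delta']{T'}$ and for $(g : \Ctx) \STo T'$ never arise, since those types are well-formed only at layer $1$. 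Consequently, in the surviving clauses the layer index $i$ is consulted only at recursive occurrences of the layer-indexed term- and substitution-reducibility on strictly smaller types and contexts, which the induction hypotheses settle, while the weak-head-reduction conjuncts and the generic-equivalence conjuncts $\ltygteq{w}{w'}{T}$ are the same regardless of $i$. Also, $\lsemtyeq 1 t {t'} T$ is meaningful because $\lsemjudge 0 T$ yields $\lsemjudge 1 T$ by the semantic Lifting lemma.

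For $T = \Nat$ there is nothing to do: both $\lsemtyeq 0 t {t'} \Nat$ and $\lsemtyeq 1 t {t'} \Nat$ unfold, by definition, to the layer-independent relation $\dsemtyeq t {t'} \Nat$. For $T = S \func T'$, I unfold the definitions: $\lsemtyeq i t {t'}{S \func T'}$ asserts that $t, t'$ weak-head reduce to $w, w'$ with $\ltygteq{w}{w'}{S \func T'}$, plus the Kripke condition that for every $\gamma; \tau$ and every $s, s'$ with $\lsemtyeq i s {s'}{S[\gamma]}$ one has $\lsemtyeq i {w[\gamma;\tau]~s}{w'[\gamma;\tau]~s'}{T'[\gamma]}$. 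The reduction and generic-equivalence parts are literally identical for $i = 0$ and $i = 1$. For the Kripke condition, $\lsemjudge 0 S$ gives $\ljudge 0 S$ by Escape, hence $S[\gamma] = S$ by the no-op lemma for global weakenings on layer-$0$ types, and likewise $T'[\gamma] = T'$; the induction hypotheses for $S$ and $T'$ (available at every global weakening, since those derivations are among the premises of the $\func$-rule) then give $\lsemtyeq 0 s {s'} S \Leftrightarrow \lsemtyeq 1 s {s'} S$ and $\lsemtyeq 0 r {r'} {T'} \Leftrightarrow \lsemtyeq 1 r {r'} {T'}$ for arbitrary $r, r'$. Thus the hypothesis and conclusion of the Kripke condition are pointwise equivalent, so the condition itself is, and $\lsemtyeq 0 t {t'}{S \func T'} \Leftrightarrow \lsemtyeq 1 t {t'}{S \func T'}$.

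For the local-substitution bullet I induct on $\lsemjudge 0 \Delta$. When $\Delta = \cdot$, membership $\lsemtyeq i \delta {\delta'} \cdot$ amounts to: either the ambient codomain context $\Gamma$ ends with $\cdot$ and $\delta = \delta' = \cdot^{|\Gamma|}$, or $\Gamma$ ends with some $g$, with $g : \Ctx \in \Psi$, and $\delta = \delta' = \cdot_g^{|\Gamma|}$ --- a condition free of $i$, so the two relations coincide. When $\Delta = g$, membership forces $\Gamma$ to end with $g$ and $\delta = \delta' = \wk_g^{|\Gamma|}$, again free of $i$. When $\Delta = \Delta', x : T$, membership forces $\delta, \delta'$ to be cons substitutions $(\delta_0, t/x)$, $(\delta_0', t'/x)$ with $\lsemtyeq i {\delta_0}{\delta_0'}{\Delta'}$ and $\lsemtyeq i t {t'} T$; the induction hypothesis for $\Delta'$ and the type bullet applied to $T$ (valid since $\lsemjudge 0 T$) make both conjuncts layer-independent, and we are done.

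I expect no genuine obstacle here --- the work is bookkeeping. The points needing care are: justifying the induction principle, which amounts to observing that the type half never actually invokes the substitution half (even inside the Kripke clause for $\func$ its only recursive calls are to term-reducibility at strictly smaller types), while the substitution half at a cons context appeals only to the type half and to itself at sub-derivations; and keeping track of the implicit ambient $\Psi; \Gamma$ under the Kripke quantifiers in the function case, which is precisely why the no-op lemma $T[\gamma] = T$ for $\ljudge 0 T$ is needed --- it keeps the types appearing under weakening in the exact shape to which the induction hypothesis applies.
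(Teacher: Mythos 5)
Your proof is correct and follows the same route the paper takes: the paper proves the first bullet as a separate lemma (the proof there being the one-liner ``consider the cases for $\Nat$ and functions and apply IHs directly'') and restates it here together with the local-substitution generalization, for which the three cases of the inductive-recursive definition are inspected exactly as you do. Your extra appeal to the no-op lemma $T[\gamma]=T$ for layer-$0$ types is a harmless variation — the IH is already available at every global weakening via the universally-quantified premises of the $\func$-rule — but it does cleanly identify the quantified types, so the proof is sound either way.
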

This lemma is particularly useful to help treat global variables the same at both layers.

Our goal is then to show the following theorem:
\begin{theorem}[Completeness]\labeledit{thm:cvar:comp} $ $
  \begin{itemize}
  \item If $\ljudge[\Psi] i T$, then $\lsemjudge[\Psi] i T$.
  \item If $\ltyequiv i {t}{t'} T$, then $\lsemtyeq i {t}{t'} T$.
  \item If $\ltyequiv i {\delta}{\delta'} \Delta$, then $\lsemtyeq i {\delta}{\delta'} \Delta$.
  \end{itemize}
\end{theorem}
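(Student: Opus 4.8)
The plan is to prove the three statements by a single well-founded mutual induction, treating the first (well-formedness) essentially on its own by induction on $\ljudge i T$, and the latter two together as the ``fundamental lemma'' of the logical relation. For the first part I would induct on the derivation of $\ljudge i T$: the $\Nat$ case is immediate; for $S \func T$ and $\cont[\Delta]T$ the semantic definition demands semantic well-formedness of the components under every global weakening, which follows from the syntactic Global Weakening lemma for $\ljudge$ (\Cref{lem:cvar:glob-wk-ty}) together with the induction hypothesis and the algebra of weakenings; and for $(g : \Ctx) \STo T$ one unfolds the semantic definition, applies the syntactic Global Weakening and Global Substitution lemmas to obtain $\ljudge[\Phi] 1 {T[q(\gamma)][\id_\Phi, \Gamma/g]}$, and appeals to the induction hypothesis, with the Naturality lemma rewriting $T[q(\gamma)][\id_\Phi, \Gamma/g]$ into the needed shape. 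If the derivation does not literally shrink here, one instead proves in parallel a closure lemma (semantic well-formedness is preserved by global substitution of semantically well-formed local contexts) by induction on $\lsemjudge i T$, whose meta-function case is immediate since it is already a universal quantification.

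For parts 2 and 3 — the heart of the matter — I would not prove the bare statements directly but strengthen them to a fundamental lemma closed under reducible substitutions: if $\ltyequiv i t {t'} T$ in $\Psi;\Gamma$, then for every reducible global substitution $\sigma$ with $\typing[\Phi]\sigma\Psi$ (its local-context components semantically well-formed, its layer-$0$ term components reducible) and every reducible local substitution pair $\lsemtyeq[\Phi][\Delta] i \rho {\rho'}{\Gamma[\sigma]}$, one has $\lsemtyeq[\Phi][\Delta] i {t[\sigma][\rho]}{t'[\sigma][\rho']}{T[\sigma]}$, and likewise for local substitutions $\delta$. The stated theorem then falls out by instantiating $\sigma = \id_\Psi$ and $\rho = \rho' = \id_\Gamma$, which are reducible by the Reducibility-of-Identity/-Neutrals corollaries and part 1, together with the identity laws for substitutions. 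The fundamental lemma is proved by mutual induction on the derivations of $\ltyequiv i t {t'} T$ and $\ltyequiv i \delta {\delta'} \Delta$. The congruence/reflexivity cases follow from the Congruence laws of the generic equivalence and the case-by-case definitions of the reducibility predicate: e.g.\ for $\lambda$-congruence the weak-head-normal-form conditions hold because $\lambda x.t_0$ is already normal, and the ``reducible argument'' condition follows by one $\beta$-step, the Weak-head-closure law, and the induction hypothesis for $t_0$ applied to $\rho$ extended with the given reducible argument (reducible using Reducibility of Weakenings for the monotone part); the $\boxit{}$ case uses the induction hypothesis for the layer-$0$ body together with exactly the substitution quantifier built into the strengthened statement. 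Symmetry and transitivity are discharged by the PER lemmas for the reducibility predicates, threading $\rho \approx \rho$ and symmetry as needed to align the substitution arguments. The $\beta$-rules are handled uniformly: by Weak-head-closure it suffices to prove reducibility of the contractum against the right-hand side, which one obtains from the induction hypothesis for the body after rewriting with the substitution-algebra lemmas (composition and associativity of local and global substitutions, \Cref{lem:cvar:gsubst-skip}, \Cref{lem:cvar:lsubst-skip}, and Naturality); the $\eta$-rules use the reducibility hypothesis for $t$ and the $\eta$-expanding Congruence laws.

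The main obstacle I expect is the interaction with global substitutions in the $\beta$-cases for meta-functions and for $\tletbox$: $(\Lambda g. t_0)~\$~\Delta \redd t_0[\id_\Psi, \Delta/g]$ substitutes a local context for a context variable, and $\letbox u {\boxit s} {t_0} \redd t_0[\id_\Psi, s/u]$ substitutes a layer-$0$ term for a global variable, so in each case the contractum is the body of the redex under a global substitution that is \emph{not} a weakening. This is precisely what forces the fundamental lemma to carry the global-substitution quantifier above (or, equivalently, an auxiliary ``semantic global substitution'' lemma proved by induction on $\lsemjudge i T$), and one must verify that global substitutions built from semantically well-formed contexts and reducible layer-$0$ terms are reducibility-preserving — here the layer-$0$ fragment of the model, and the Layering Restriction lemma identifying the layer-$0$ and layer-$1$ predicates on layer-$0$ types, do real work, and one has to be careful that the many substitution-algebra equalities line up exactly. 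The remaining bookkeeping — commuting global substitutions past local weakenings, skipping bindings, and matching $q(-)$ on weakenings with $q(-)$ on substitutions — is routine given the algebra lemmas already established.
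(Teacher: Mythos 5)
Your proposal is essentially the paper's approach. The ``fundamental lemma closed under reducible global substitutions'' you formulate for parts 2 and 3 is precisely the paper's validity judgment ($\lsemvtyeq i{t}{t'}{T}$ and friends, defined by induction-recursion on the global context), and the stated theorem is then obtained, exactly as you say, by specializing the quantifier to the identity global and local substitutions. You also correctly locate the $\beta$-rules for $\Lambda g.\,t$ and $\tletbox$ as the contractions whose reducts sit under a genuine global substitution, and hence as the reason the quantifier must be carried.

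One side remark deserves a warning, though: you suggest the global-substitution quantifier could ``equivalently'' be packaged as an auxiliary closure lemma proved directly by induction on $\lsemjudge i T$, and that the meta-function case of that lemma ``is immediate since it is already a universal quantification.'' That is the one path the paper explicitly shows does \emph{not} work, and it is what motivates the validity judgments in the first place. The quantifier in $\lsemjudge 1{(g : \Ctx) \STo T}$ ranges over global \emph{weakenings} $\gamma$ and concrete local contexts $\Gamma$, not over arbitrary global substitutions: to show $\lsemjudge[\Phi]1{((g : \Ctx) \STo T)[\sigma]}$ one must produce $\lsemjudge[\Phi']1{T[\sigma[\gamma],\Gamma/g]}$ for an arbitrary $\sigma$, but the hypothesis only hands out instances of the form $T[\gamma'',\Gamma''/g]$ with $\gamma''$ a weakening, and monotonicity under weakenings does not bridge the gap. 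The paper exhibits exactly this stuck state: from the IH one reaches $\lsemjudge[\Phi, g : \Ctx] 1{T[q(\gamma)]}$ and can go no further. So the directly-proven closure lemma is not a viable fallback; the strengthened-statement route you actually take (equivalently, proving $\lsemvjudge i T$ rather than $\lsemjudge i T$, mutually with parts 2 and 3 of the fundamental theorem) is the only workable one.
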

If the $\ltygteq{t}{t'}{T}$ is the algorithmic convertibility checking algorithm, then
we show that syntactic equivalence implies algorithmic equivalence. %
In other words, algorithmic convertibility is complete w.r.t. syntactic equivalence. %
However, to arrive at that solution, we must first show that the completeness theorem
above holds w.r.t. the generic equivalence relations. %
Following \citet{abel_decidability_2017}, however, we still have one step missing to
conclude this goal. %
We must define a set of validity judgments to handle the meta-function case in the
semantic well-formedness of types.

\subsection{Validity Judgments}

According to \citet{abel_decidability_2017}, validity judgments are introduced to
characterize the reducible terms that are also closed under substitutions. %
In the same spirit, we also need the validity judgments to handle the case for
meta-functions:
\begin{mathpar}
  \inferrule
  {\ljudge[\Psi, g : \Ctx] 1 T \\
    \forall \gamma : \Phi \To_g \Psi \tand \ljudge[\Phi] 0 \Gamma ~.~ \lsemjudge[\Phi]
    1{T[q(\gamma)][\id_\Phi, \Gamma/g]}}
  {\lsemjudge 1 {(g : \Ctx) \STo T}}
\end{mathpar}
When we attempt to prove the following statement from the completeness theorem:
\begin{center}
  If $\ljudge[\Psi] i T$, then $\lsemjudge[\Psi] 1 T$.
\end{center}
This case breaks down, because the IH only provides
\[
  \lsemjudge[\Psi, g : \Ctx] 1{T}
\]
and by monotonicity, we derive 
\[
  \lsemjudge[\Phi, g : \Ctx] 1{T[q(\gamma)]}
\]
but then we are stuck. %
We have no way to prove that the semantic well-formedness of types is closed under
substitutions. %
Following~\citet{abel_decidability_2017}, we define validity judgments which further
restrict reducibility predicates to subsets that are closed under substitutions. %
Since types are only affected by the global contexts, the validity judgments are
defined by induction-recursion on global contexts:
\begin{mathpar}
  \inferrule
  { }
  {\vDash^v \cdot}
\end{mathpar}

The validity equivalence for global substitutions
$\semvtyp[\Phi]{\sigma}{\cdot}$ is defined as $\vdash \Phi$ and $\sigma =\cdot$. 

\begin{mathpar}
  \inferrule
  {\vDash^v \Psi}
  {\vDash^v \Psi, g : \Ctx}
\end{mathpar}
$\semvtyp[\Phi]{\sigma}{\Psi, g : \Ctx}$ is defined as
\begin{itemize}
\item $\sigma = \sigma_1, \Gamma/g$,
\item $\ljudge[\Phi] 0 \Gamma$, and
\item $\semvtyp[\Phi]{\sigma_1}{\Psi}$.
\end{itemize}

\begin{mathpar}
  \inferrule
  {\vDash^v \Psi \\ \lsemvjudge 0 \Gamma \\ \lsemvjudge 0 T}
  {\vDash^v \Psi, u : (\judge T)}
\end{mathpar}
$\semvtyp[\Phi]{\sigma}{\Psi, u : (\judge T)}$ is defined as
\begin{itemize}
\item $\sigma = \sigma_1, t/u$,
\item $\semvtyp[\Phi]{\sigma_1}{\Psi}$, and
\item for all $\lsemtyeq[\Phi][\Delta]0{\delta}{\delta'}{\Gamma[\sigma_1]}$, we have
  $\lsemtyeq[\Phi][\Delta] 0 {t[\delta]}{t[\delta']}{T[\sigma_1]}$.
\end{itemize}

The validity of types $\lsemvjudge i T$ is defined as: $\lsemjudge i T$ and given
$\semvtyp[\Phi]{\sigma}{\Psi}$, we have $\lsemjudge[\Phi] i {T[\sigma]}$. %
The validity of local contexts $\lsemvjudge i \Gamma$ is defined by applying the validity
of types pointwise. %
The judgments are simplified because there is no need to have an equivalence judgment
between global substitutions. %
In a global substitution, there are only two kinds of components: code of STLC and
local contexts. %
The former is determined to be identified by syntax. %
Local contexts are also identified by syntax because we are dealing with simple
types. %

With these definitions ready, we put a universal quantification on top of the
reducibility predicates, which specifies the reducible terms that are closed under
valid global substitutions:
\begin{itemize}
\item $\lsemvtyeq i{t}{t'}{T}$ iff for any $\semvtyp[\Phi]{\sigma}{\Psi}$ and
  $\lsemtyeq[\Phi][\Delta]i{\delta}{\delta'}{\Gamma[\sigma]}$, we have
  $\lsemtyeq[\Phi][\Delta] i{t[\sigma][\delta]}{t'[\sigma][\delta]}{T[\sigma]}$.
\item $\lsemvtyeq i{\delta}{\delta'}{\Delta}$ iff for any
  $\semvtyp[\Phi]{\sigma}{\Psi}$ and
  $\lsemtyeq[\Phi][\Delta']i{\delta''}{\delta'''}{\Gamma[\sigma]}$, we have
  $\lsemtyeq[\Phi][\Delta']
  i{\delta[\sigma] \circ \delta''}{\delta'[\sigma] \circ \delta''}{\Delta[\sigma]}$.
\end{itemize}

Now we work out several lemmas:
\begin{lemma}[Escape]
  If $\semvtyp[\Phi]{\sigma}{\Psi}$, then $\typing[\Phi]\sigma\Psi$. 
\end{lemma}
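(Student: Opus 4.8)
The plan is to induct on the derivation of $\vDash^v \Psi$, equivalently on the structure of $\Psi$, since the validity judgment $\semvtyp[\Phi]{\sigma}{\Psi}$ is defined by recursion on exactly this data. In each of the three cases the task is to reassemble the corresponding instance of the typing rule for global substitutions from the unfolded validity hypothesis together with the induction hypothesis.

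The first two cases are routine. If $\Psi = \cdot$, then $\semvtyp[\Phi]{\sigma}{\cdot}$ unfolds to $\vdash \Phi$ and $\sigma = \cdot$, so the empty-substitution rule applies directly. If $\Psi = \Psi', g : \Ctx$, then the hypothesis gives $\sigma = \sigma_1, \Gamma/g$ with $\ljudge[\Phi] 0 \Gamma$ and $\semvtyp[\Phi]{\sigma_1}{\Psi'}$; the induction hypothesis gives $\typing[\Phi]{\sigma_1}{\Psi'}$, and $\ljudge[\Phi] 0 \Gamma$ is precisely the remaining premise, so the rule for $\sigma_1, \Gamma/g$ closes the case.

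The substantive case is $\Psi = \Psi', u : (\judge[\Gamma] T)$. Unfolding the validity hypothesis gives $\sigma = \sigma_1, t/u$, $\semvtyp[\Phi]{\sigma_1}{\Psi'}$, and the closure clause that $\lsemtyeq[\Phi][\Delta] 0 {t[\delta]}{t[\delta']}{T[\sigma_1]}$ holds whenever $\lsemtyeq[\Phi][\Delta] 0 {\delta}{\delta'}{\Gamma[\sigma_1]}$; unfolding $\vDash^v \Psi$ further records $\lsemvjudge[\Psi'] 0 \Gamma$ and $\lsemvjudge[\Psi'] 0 T$. From the induction hypothesis I obtain $\typing[\Phi]{\sigma_1}{\Psi'}$, and I still owe the remaining premises of the rule for $\sigma_1, t/u$: the well-formedness of $\Gamma$ and of $T$, and the typing $\ltyping[\Phi][\Gamma[\sigma_1]] 0 {t}{T[\sigma_1]}$. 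The well-formedness premises follow by transporting $\lsemvjudge[\Psi'] 0 \Gamma$ and $\lsemvjudge[\Psi'] 0 T$ along the valid substitution $\sigma_1$ --- using the closure clause built into the validity-of-types judgment --- and then invoking the escape lemma for semantic well-formedness of types and local contexts; this in particular yields $\lsemjudge[\Phi] 0 {\Gamma[\sigma_1]}$. For the typing of $t$, I instantiate the closure clause at the identity local substitution $\id_{\Gamma[\sigma_1]}$, which is a legal instance by Reducibility of Identity given $\lsemjudge[\Phi] 0 {\Gamma[\sigma_1]}$. This produces the reducibility of $t[\id_{\Gamma[\sigma_1]}]$ with itself at $T[\sigma_1]$; applying the escape lemma for reducibility and then presupposition yields a syntactic typing of $t[\id_{\Gamma[\sigma_1]}]$, and since the identity local substitution acts as the identity on well-scoped terms (the closure clause forces the free variables of $t$ into $\Gamma[\sigma_1]$) this is a typing of $t$ itself. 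Feeding the four premises into the rule gives $\typing[\Phi]{\sigma}{\Psi}$.

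I expect the main obstacle to be exactly this last step, and within it the layer bookkeeping: the escape lemma for reducibility returns only a layer-$1$ equivalence, whereas the substitution rule wants the typing of $t$ at layer $0$. One must argue that, because $\Gamma$ and $T$ are layer-$0$ data and $t$ was supplied as a reducible term at layer $0$, the typing is in fact available at layer $0$ --- here the layering lemmas relating the layer-$0$ and layer-$1$ reducibility predicates, together with the inclusion of layer-$0$ typing into layer-$1$ typing, close the gap. A secondary, purely bookkeeping nuisance is chasing the identity-substitution equations (that $t[\id_{\Gamma[\sigma_1]}] = t$ and that $\Gamma[\sigma_1]$ is the codomain of $\id_{\Gamma[\sigma_1]}$) so that the chosen instance of the closure clause lines up syntactically with the goal.
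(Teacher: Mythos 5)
Your proof follows the same route as the paper: induction on $\vDash^v \Psi$, immediate in the empty and context-variable cases, and in the code-extension case recover the typing-rule premises by escape lemmas --- escape for semantic well-formedness of types/contexts, and escape for reducibility after instantiating the closure clause at the identity local substitution via Reducibility of Identity. The paper collapses all of this into two sentences, but the plan is identical, and you supply more of the missing scaffolding than the source does.

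The place I would push back on is exactly the spot you flag and then try to wave past. The escape lemma for reducibility is stated as ``If $\lsemtyeq i {t}{t'} T$, then $\ltyequiv 1 {t}{t'}T$,'' so presupposition hands you $\ltyping 1 t {T[\sigma_1]}$, whereas the typing rule for $\sigma_1, t/u$ demands $\ltyping 0 t {T[\sigma_1]}$. The repair you sketch --- ``the layering lemmas relating the layer-$0$ and layer-$1$ reducibility predicates, together with the inclusion of layer-$0$ typing into layer-$1$ typing'' --- does not close this gap. Layering restriction relates the two \emph{reducibility} predicates, which says nothing about syntactic typing, and the inclusion of layer-$0$ typing into layer-$1$ typing (the Lifting lemma) points in the wrong direction: you have layer $1$ and need layer $0$. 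What is actually required is an \emph{unlifting} observation --- that a term well-typed at layer $1$ and built only from layer-$0$ constructors is already well-typed at layer $0$. Such a lemma appears for \delamlang\ in \Cref{sec:dt:syn} but is never stated for the STLC-with-context-variables fragment, so you cannot cite it for free. (The paper's own proof is equally silent here, so you are no worse off than the source, but you should withdraw the claim that the cited lemmas resolve it and either supply an explicit unlifting argument or note that the escape lemma is presumably intended to be layer-indexed.) A lesser nit: for the well-formedness premises you transport $\lsemvjudge[\Psi'] 0 \Gamma$, $\lsemvjudge[\Psi'] 0 T$ along $\sigma_1$ before escaping, but the first conjunct of $\lsemvjudge$ already gives $\lsemjudge[\Psi'] 0 \Gamma$, $\lsemjudge[\Psi'] 0 T$ outright, which escapes directly to the premises in the form the rule wants; the transport step is a detour.
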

\begin{proof}
  We do induction on $\vDash^v \Psi$. %
  In the case for extension of code, we apply the escape lemma for semantically
  well-formed types and reducible terms.
\end{proof}

\begin{lemma}[Monotonicity]
  If $\semvtyp[\Phi]{\sigma}{\Psi}$ and $\gamma : \Phi' \To_g \Phi$, then
  $\semvtyp[\Phi']{\sigma[\gamma]}{\Psi}$.
\end{lemma}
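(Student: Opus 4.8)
The plan is to induct on the derivation of $\vDash^v \Psi$, which is exactly the structure on which the relation $\semvtyp[\Phi]{\sigma}{\Psi}$ is defined by induction--recursion, so there are three cases. When $\Psi = \cdot$, $\semvtyp[\Phi]{\sigma}{\cdot}$ forces $\sigma = \cdot$ together with $\vdash \Phi$; from $\gamma : \Phi' \To_g \Phi$ we get $\vdash \Phi'$ by presupposition of global weakenings, and $\cdot[\gamma] = \cdot$ by the definition of the action of global weakenings on global substitutions, so $\semvtyp[\Phi']{\sigma[\gamma]}{\cdot}$ holds. When $\Psi = \Psi_0, g : \Ctx$, $\semvtyp[\Phi]{\sigma}{\Psi_0, g : \Ctx}$ unfolds to $\sigma = \sigma_1, \Gamma_0/g$ with $\ljudge[\Phi] 0 {\Gamma_0}$ and $\semvtyp[\Phi]{\sigma_1}{\Psi_0}$; then \Cref{lem:cvar:glob-wk-ty} gives $\ljudge[\Phi'] 0 {\Gamma_0[\gamma]}$, the induction hypothesis gives $\semvtyp[\Phi']{\sigma_1[\gamma]}{\Psi_0}$, and since $(\sigma_1, \Gamma_0/g)[\gamma] = \sigma_1[\gamma], \Gamma_0[\gamma]/g$ these are precisely the three ingredients the definition of $\semvtyp[\Phi']{\sigma[\gamma]}{\Psi_0, g : \Ctx}$ asks for.

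The substantive case is $\Psi = \Psi_0, u : (\judge[\Gamma] T)$. Here $\semvtyp[\Phi]{\sigma}{\Psi}$ gives $\sigma = \sigma_1, t/u$, $\semvtyp[\Phi]{\sigma_1}{\Psi_0}$, and the closure clause: for every $\lsemtyeq[\Phi][\Delta]0{\delta}{\delta'}{\Gamma[\sigma_1]}$ one has $\lsemtyeq[\Phi][\Delta]0{t[\delta]}{t[\delta']}{T[\sigma_1]}$. Since $(\sigma_1, t/u)[\gamma] = \sigma_1[\gamma], t[\gamma]/u$, the induction hypothesis discharges the $\semvtyp[\Phi']{\sigma_1[\gamma]}{\Psi_0}$ component, and it remains to re-establish the closure clause with $t[\gamma]$ in place of $t$ and $\Phi'$ in place of $\Phi$: given $\lsemtyeq[\Phi'][\Delta']0{\delta}{\delta'}{\Gamma[\sigma_1[\gamma]]}$, show $\lsemtyeq[\Phi'][\Delta']0{t[\gamma][\delta]}{t[\gamma][\delta']}{T[\sigma_1[\gamma]]}$. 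The algebra of global substitutions rewrites $\Gamma[\sigma_1[\gamma]] = \Gamma[\sigma_1][\gamma]$ and $T[\sigma_1[\gamma]] = T[\sigma_1][\gamma]$; combined with monotonicity of the semantic well-formedness of types and the validity of $\Gamma$ and $T$ that is carried inside $\vDash^v \Psi$, this exhibits $\Gamma[\sigma_1]$, $T[\sigma_1]$, and the datum $\delta$ in the shape against which the hypothesis's closure clause can be run: the closure clause, phrased entirely through the Kripke reducibility predicates, transports along the global weakening $\gamma$ by the monotonicity lemma for reducibility, and the result is then put back into the required form $t[\gamma][\delta]$ using the algebra of local substitutions.

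I expect this last step to be the only real obstacle: the hypothesis speaks about reducible local substitutions over $\Phi$ whereas the goal speaks about those over $\Phi'$, so one must lean on the Kripke character of the reducibility predicates (their monotonicity in global weakenings, already established) to move the closure clause along $\gamma$, and then on the substitution-algebra lemmas to make $t[\gamma][\delta]$, $\Gamma[\sigma_1][\gamma]$, and $T[\sigma_1][\gamma]$ coincide syntactically with the shapes the hypothesis produces. The empty and context-variable cases, by contrast, reduce to unfolding the definition, the induction hypothesis, and \Cref{lem:cvar:glob-wk-ty}, so no genuine difficulty arises there.
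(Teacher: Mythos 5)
Your overall strategy — induction on $\vDash^v \Psi$, with the empty and context-variable cases handled by \Cref{lem:cvar:glob-wk-ty} and the induction hypothesis — is exactly the paper's, and those two cases are fine. The substantive issue is in the code case, and it is precisely the step you flag as the "only real obstacle," but your proposed resolution does not actually close it. You are handed a pair $\delta,\delta'$ with $\lsemtyeq[\Phi'][\Delta']0{\delta}{\delta'}{\Gamma[\sigma_1][\gamma]}$, i.e.\ related \emph{over} $\Phi'$, and you need to invoke a closure clause whose hypothesis is universally quantified over local substitutions related \emph{over} $\Phi$. Monotonicity of the reducibility predicates moves a single reducibility fact \emph{forward} along a global weakening (from $\Phi$ to the larger $\Phi'$); it cannot move the given $\delta,\delta'$ \emph{backward} from $\Phi'$ to $\Phi$, and such strengthening is false in general — $\delta$ may refer to a global binding that exists in $\Phi'$ but not in $\Phi$. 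Likewise, a universally quantified implication is not itself something "monotonicity for reducibility" transports; the quantifier's domain at $\Phi$ is simply not the same as at $\Phi'$. So the sequence "plug $\delta$ into the hypothesis, then weaken the result along $\gamma$" never type-checks: the first step already fails.

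The way to make the case go through is for the closure clause in the definition of $\semvtyp[\Phi]{\sigma_1,t/u}{\Psi,u:(\judge T)}$ to itself be Kripke-closed over global weakenings of $\Phi$ — i.e.\ to quantify over all $\gamma':\Phi_1\To_g\Phi$ and over $\lsemtyeq[\Phi_1][\Delta]0{\delta}{\delta'}{\Gamma[\sigma_1][\gamma']}$, concluding $\lsemtyeq[\Phi_1][\Delta]0{t[\gamma'][\delta]}{t[\gamma'][\delta']}{T[\sigma_1][\gamma']}$. With that reading the monotonicity lemma is just weakening composition: given $\gamma'':\Phi_2\To_g\Phi'$, compose with $\gamma$ to obtain $\gamma\circ\gamma'':\Phi_2\To_g\Phi$ and instantiate the hypothesis there; the algebra of weakenings then tidies up $t[\gamma][\gamma'']=t[\gamma\circ\gamma'']$ etc. The paper's own one-sentence proof ("monotonicity of reducible terms and the algebra of global substitutions") elides this entirely, and the informal rendering of the definition omits the $\gamma'$-quantifier, so your difficulty is understandable — but the mechanism you name (monotonicity of the reducibility relation) cannot be what resolves it. The missing idea is the Kripke closure in the validity judgment itself, after which the argument is about composing weakenings rather than about weakening reducibility facts.
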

\begin{proof}
  We do induction on $\vDash^v \Psi$. %
  Use monotonicity of reducible terms and the algebra of global substitutions. %
\end{proof}

\begin{lemma}[Validity of Global Weakening Substitutions]
  If $\vdash \Psi, \Phi$ and $\vDash^v \Psi$, then $\semvtyp[\Psi,\Phi]{\wk^{|\Phi|}_{\Psi}}{\Psi}$. 
\end{lemma}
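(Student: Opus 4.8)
The plan is to proceed by induction on the derivation of $\vDash^v \Psi$, with $\Phi$ (together with the hypothesis $\vdash \Psi, \Phi$) universally quantified; this matches the recursion of $\wk^{|\Phi|}_{\Psi}$ on the structure of $\Psi$. If $\Psi = \cdot$, then $\wk^{|\Phi|}_{\cdot} = \cdot$ and $\Psi, \Phi = \Phi$, so the only content of the goal is $\vdash \Phi$, which is at hand. If $\Psi = \Psi_0, g : \Ctx$ (from $\vDash^v \Psi_0$), then $\wk^{|\Phi|}_{\Psi_0, g : \Ctx} = \wk^{1 + |\Phi|}_{\Psi_0}, g/g$; by the definition of validity at a context-variable extension we must supply a local context for $g$ --- take $g$ itself, which satisfies $\ljudge[\Psi_0, g : \Ctx, \Phi] 0 g$ --- and a valid prefix $\semvtyp[\Psi_0, g : \Ctx, \Phi]{\wk^{1 + |\Phi|}_{\Psi_0}}{\Psi_0}$, which is exactly the induction hypothesis applied with the extended tail $(g : \Ctx), \Phi$, whose well-formedness $\vdash \Psi_0, g : \Ctx, \Phi$ is the hypothesis.

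The interesting case is $\Psi = \Psi_0, u : (\judge T)$, obtained from $\vDash^v \Psi_0$, $\lsemvjudge 0 \Gamma$ and $\lsemvjudge 0 T$, with $\Gamma$ the local context of the binding. Here $\wk^{|\Phi|}_{\Psi_0, u : (\judge T)} = \wk^{1 + |\Phi|}_{\Psi_0}, u^{\id_\Gamma[p^{1 + |\Phi|}(\id)]}/u$, and by the definition of validity at a term-variable extension, writing $\Phi' := \Psi_0, u : (\judge T), \Phi$, there are two obligations: (i) the valid prefix $\semvtyp[\Phi']{\wk^{1 + |\Phi|}_{\Psi_0}}{\Psi_0}$, again immediate from the induction hypothesis with tail $(u : (\judge T)), \Phi$; and (ii) that for every $\lsemtyeq[\Phi'][\Delta'] 0 {\delta}{\delta'}{\Gamma[\wk^{1 + |\Phi|}_{\Psi_0}]}$ we have $\lsemtyeq[\Phi'][\Delta'] 0 {u^{\id_\Gamma[p^{1 + |\Phi|}(\id)]}[\delta]}{u^{\id_\Gamma[p^{1 + |\Phi|}(\id)]}[\delta']}{T[\wk^{1 + |\Phi|}_{\Psi_0}]}$.

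For obligation (ii) I would first collapse the global weakening substitutions on the layer-$0$ data: by \Cref{lem:cvar:wk-p-n-ty} together with the fact that global weakenings act as the identity on layer-$0$ types and contexts, $\Gamma[\wk^{1 + |\Phi|}_{\Psi_0}] = \Gamma$ and $T[\wk^{1 + |\Phi|}_{\Psi_0}] = T$, and $\id_\Gamma[p^{1 + |\Phi|}(\id)]$ is unchanged. By the action of local substitutions on terms, $u^{\id_\Gamma[p^{1 + |\Phi|}(\id)]}[\delta] = u^{\id_\Gamma[p^{1 + |\Phi|}(\id)] \circ \delta}$ (and likewise for $\delta'$), so it remains to show this neutral term is reducible at type $T$ over $\Phi'; \Delta'$. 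I would apply \Cref{lem:cvar:red-ne}: the semantic well-formedness $\lsemjudge[\Phi'] 0 T$ follows from $\lsemvjudge 0 T$ (well-formedness is preserved under global-context extension); the syntactic typing of $u^{\id_\Gamma[p^{1 + |\Phi|}(\id)] \circ \delta}$ follows by globally weakening the well-typedness of $\id_\Gamma$ via \Cref{lem:cvar:glob-wk-tm}, composing with the (escaped, hence well-typed) $\delta$, and using the weakening-aware lookup $u : (\judge T) \in \Phi'$; and the generic neutral equivalence $\ltygneeq[\Phi'][\Delta']{u^{\id_\Gamma[p^{1 + |\Phi|}(\id)] \circ \delta}}{u^{\id_\Gamma[p^{1 + |\Phi|}(\id)] \circ \delta'}}{T}$ follows from the congruence-of-neutrals law for global variables, whose remaining hypothesis is a generic equivalence of the two composed local substitutions. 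That equivalence I would obtain by composing the reducibility of $\id_\Gamma[p^{1 + |\Phi|}(\id)]$ (from reducibility of the local identity substitution over $\Phi'$) with the given reducibility of $\delta$ and $\delta'$, using closure of reducibility under composition of local substitutions, followed by \Cref{lem:cvar:semeq2geneq-lsubst}.

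I expect the main obstacle to be the bookkeeping in obligation (ii): keeping the global weakening $p^{1 + |\Phi|}(\id)$ carried inside the stored term $u^{\id_\Gamma[p^{1 + |\Phi|}(\id)]}$ synchronized with the tail $\Phi$, and invoking the layer-$0$ no-op property at exactly the right points so that $T[\wk^{1 + |\Phi|}_{\Psi_0}]$ and the local substitution $\id_\Gamma[p^{1 + |\Phi|}(\id)] \circ \delta$ match the conclusions of \Cref{lem:cvar:red-ne} and the congruence-of-neutrals law. The two easy cases and the prefix obligation of the hard case are routine applications of the induction hypothesis.
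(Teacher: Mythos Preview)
Your proposal is correct and follows essentially the same approach as the paper: induction on $\vDash^v \Psi$, with the term-variable case discharged by the reducibility-of-neutrals lemma together with the congruence law for global variables. The only notable difference is that the paper simplifies more aggressively: rather than keeping the composite $\id_\Gamma[p^{1+|\Phi|}(\id)] \circ \delta$ and arguing reducibility of that composite via ``closure of reducibility under composition'' (a lemma the paper never states), it observes \emph{by computation} that $u^{\id_\Gamma[p^{1+|\Phi|}(\id)]}[\delta] = u^{\delta}$ outright (left identity of local substitutions, after identifying $\id_\Gamma[p^{1+|\Phi|}(\id)]$ with the identity on the weakened context). This reduces obligation (ii) directly to showing $\lsemtyeq[\Phi'][\Delta']0{u^{\delta}}{u^{\delta'}}{T[p^{1+|\Phi|}(\id)]}$, which then follows from \Cref{lem:cvar:semeq2geneq-lsubst} applied to the \emph{given} $\delta \approx \delta'$ plus \Cref{lem:cvar:red-ne}. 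Your route works too, but the paper's shortcut avoids the need for a composition lemma you would otherwise have to justify.
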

\begin{proof}
  We do induction on $\vDash^v \Psi$. %
  The most interesting case is the extension case for code. %
  If $\Psi = \Psi', u : (\judge T)$, then our goal is given $\lsemtyeq[\Psi', u : (\judge T), \Phi][\Delta]0{\delta}{\delta'}{\Gamma[\wk^{1 + |\Phi|}_{\Psi'}]}$, to prove
  \[
    \lsemtyeq[\Psi', u : (\judge T), \Phi][\Delta] 0 {u^{\id_{\Gamma}[p^{1 + |\Phi|}(\id)]}[\delta]}{u^{\id_{\Gamma}[p^{1 + |\Phi|}(\id)]}[\delta']}{T[\wk^{1 + |\Phi|}_{\Psi'}]}
  \]
  In the last section, we have established that $[\wk^{1 + |\Phi|}_{\Psi'}]$ has the
  same effect as $[p^{1 + |\Phi|}(\id)]$, so this goal becomes
  \[
    \lsemtyeq[\Psi', u : (\judge T), \Phi][\Delta] 0 {u^{\id_{\Gamma}[p^{1 + |\Phi|}(\id)]}[\delta]}{u^{\id_{\Gamma}[p^{1 + |\Phi|}(\id)]}[\delta']}{T[p^{1 + |\Phi|}(\id)]}
  \]
  By computation, the goal further becomes:
  \[
    \lsemtyeq[\Psi', u : (\judge T), \Phi][\Delta] 0 {u^{\delta}}{u^{\delta'}}{T[p^{1 + |\Phi|}(\id)]}
  \]
  But this is immediate due to \Cref{lem:cvar:semeq2geneq-lsubst}, the reducibility of
  neutrals and congruence of the generic equivalence
  \[
    \ltygneeq[\Psi', u : (\judge T), \Phi][\Delta] {u^{\delta}}{u^{\delta'}}{T[p^{1 + |\Phi|}(\id)]}
  \]
\end{proof}

In particular, it proves that the identity is valid:
\begin{corollary}
  If $\vdash \Psi$, then $\semvtyp{\id}{\Psi}$. 
\end{corollary}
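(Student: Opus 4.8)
The plan is to read this off as the degenerate case $\Phi := \cdot$ of the immediately preceding lemma, \emph{Validity of Global Weakening Substitutions}. Unfolding the relevant definitions, $|\cdot| = 0$, the extended global context $\Psi, \cdot$ is just $\Psi$, and by definition $\id_{\Psi} = \wk^0_{\Psi}$; so the instance of that lemma at $\Phi := \cdot$ states precisely $\semvtyp{\id}{\Psi}$. Thus all that remains is to supply the two hypotheses of that lemma for $\Phi := \cdot$: the syntactic well-formedness $\vdash \Psi, \cdot$, which is immediate from the assumption $\vdash \Psi$ since $\Psi, \cdot = \Psi$; and the validity $\vDash^v \Psi$ of the global context.

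For the latter I would first establish the small auxiliary fact that $\vdash \Psi$ implies $\vDash^v \Psi$, by induction on the derivation of $\vdash \Psi$. The empty and context-variable-extension cases are trivial (the validity judgment for $\Psi, g : \Ctx$ unfolds to exactly $\vDash^v \Psi$). The only case with content is $\Psi, u : (\judge T)$, where besides $\vDash^v \Psi$ from the induction hypothesis I must produce $\lsemvjudge 0 \Gamma$ and $\lsemvjudge 0 T$ from the premises $\ljudge 0 \Gamma$ and $\ljudge 0 T$. This is where the layering discipline pays off: at layer $0$ the grammar of types is just $\Nat$ and $\func$, so a straightforward induction on $\ljudge 0 T$ (and on $\ljudge 0 \Gamma$) gives $\lsemjudge 0 T$ and $\lsemjudge 0 \Gamma$ outright, and the substitution-closure clause in the definition of validity of layer-$0$ types is discharged because applying a valid global substitution $\sigma$ — which is in particular a syntactic one, by the Escape lemma for validity — to a layer-$0$ well-formed type or context again yields a layer-$0$ well-formed, hence semantically well-formed, type or context in the target, by the global substitution lemma for well-formedness.

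The main obstacle, to the extent there is one, is confined to this second step: one has to unwind the mutually inductive definitions of $\vDash^v{(-)}$ and of validity of layer-$0$ types and contexts, and verify that, for the STLC-level data stored in a global context, both semantic well-formedness and closure under valid global substitutions come for free. There is no deep argument — the $\square$ and meta-function clauses of $\lsemjudge{}{}$, whose universal quantifications over global weakenings were the whole motivation for introducing validity judgments, simply never occur at layer $0$ — so the already-established monotonicity and algebra lemmas for global weakenings and substitutions are all that is needed. With $\vdash \Psi \Rightarrow \vDash^v \Psi$ in hand, the corollary follows by the one-line instantiation above; and the earlier corollary that the identity \emph{syntactic} global substitution is well-typed, $\typing[\Psi]{\id_{\Psi}}{\Psi}$, is recovered from $\semvtyp{\id}{\Psi}$ via the Escape lemma for validity.
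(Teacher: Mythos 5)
Your main reduction — instantiating the preceding \emph{Validity of Global Weakening Substitutions} lemma at $\Phi := \cdot$, with $\wk^0_\Psi = \id_\Psi$ and $\Psi,\cdot = \Psi$ — is exactly what the paper means by the phrase ``In particular, it proves that the identity is valid.'' Where you diverge is in how the second hypothesis $\vDash^v \Psi$ is discharged. The paper makes the corollary implicitly forward-dependent on the first clause of the \emph{Fundamental} theorem stated immediately afterward ($\vdash \Psi \Rightarrow \vDash^v \Psi$), which carries the validity invariant through a mutual structural induction over $\vdash \Psi$, $\ljudge 0 T$, and $\ljudge 0 \Gamma$: in the $\func$ case, $\lsemjudge[\Phi] 0 {(S \func T)[\sigma]}$ is obtained directly from the inductive hypotheses $\lsemvjudge 0 S$ and $\lsemvjudge 0 T$, without ever decomposing ``valid'' into ``semantically well-formed plus substitution-closed.'' Your argument is a layer-$0$-specific shortcut: first prove $\ljudge 0 T \Rightarrow \lsemjudge 0 T$ by a plain induction (you do need the monotonicity lemma for $\lsemjudge$ to handle the universal quantifier over global weakenings hidden in the $\func$ clause — worth saying explicitly), and then observe that the substitution-closure clause of validity comes for free because you can escape $\semvtyp[\Phi]\sigma\Psi$ to a syntactic substitution, transport layer-$0$ well-formedness along it, and apply the first step in the new global context. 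That is a perfectly sound decomposition and makes the corollary self-contained rather than a forward reference, at the cost of not generalizing to layer $1$ (which is exactly why the paper needs the full Fundamental theorem anyway). One small caveat: you invoke a ``global substitution lemma for well-formedness'' of types and local contexts at layer $0$; the paper never states this explicitly (it states the global \emph{weakening} lemma for types/contexts and the global \emph{substitution} lemma only for terms and local substitutions), though it is clearly assumed throughout. For types you can sidestep it entirely by noting that the global substitution action is literally the identity on layer-$0$ types (only $\Nat$ and $\func$ occur, neither containing a global or context variable), the substitution analogue of the stated no-op lemma for weakenings. For local contexts, however, $\Gamma$ may end in a context variable $g$ with $g[\sigma] = \sigma(g) \neq g$, so there the unstated lemma genuinely is needed and should be cited as such rather than taken silently.
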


\begin{theorem}[Fundamental] $ $
  \begin{itemize}
  \item If $\vdash \Psi$, then $\vDash^v \Psi$.
  \item If $\ljudge i T$, then $\lsemvjudge i T$. 
  \item If $\ljudge i \Gamma$, then $\lsemvjudge i \Gamma$. 
  \end{itemize}
\end{theorem}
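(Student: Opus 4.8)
The plan is to prove the three statements simultaneously, by mutual induction on the derivations of $\vdash \Psi$, $\ljudge i T$ and $\ljudge i \Gamma$ (i.e.\ following the syntactic well-formedness rules). Two observations make validity go through in every case. First, the corollary that $\semvtyp{\id_\Psi}{\Psi}$ together with monotonicity of valid global substitutions turns any global weakening $\gamma : \Phi \To_g \Psi$ into a valid global substitution $\id_\Psi[\gamma]$ with $T[\id_\Psi[\gamma]] = T[\id_\Psi][\gamma] = T[\gamma]$, so the universal quantifications over global weakenings appearing in the semantic well-formedness rules are subsumed by the substitution-closure half of validity. Second, the defining clause of $\semvtyp[\Phi]{\sigma}{\Psi, g : \Ctx}$ lets us extend any valid $\semvtyp[\Phi]{\sigma_1}{\Psi}$ by $\Gamma/g$ whenever $\ljudge[\Phi] 0 \Gamma$. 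Besides these I will use the escape lemma for valid substitutions, the escape lemma for semantically well-formed types, semantic monotonicity and lifting, and the algebra lemmas for global weakenings and substitutions. I will also establish one easy auxiliary fact first, by a direct induction: $\ljudge 0 T$ implies $\lsemjudge 0 T$ and $\ljudge 0 \Gamma$ implies $\lsemjudge 0 \Gamma$ --- this is immediate since the only layer-$0$ formers are $\Nat$, $\func$, context variables and context extension, with semantic monotonicity discharging the weakening quantifiers.

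For $\vdash \Psi$: the empty case is immediate, the $g : \Ctx$ extension follows from the IH and the validity rule, and for $\vdash \Psi, u : (\judge T)$ the premises $\ljudge 0 \Gamma$ and $\ljudge 0 T$ give $\lsemvjudge 0 \Gamma$ and $\lsemvjudge 0 T$ by IH while $\vDash^v \Psi$ follows from the IH applied to the sub-derivation $\vdash \Psi$ that sits at the leaves of the derivation of $\ljudge 0 \Gamma$, after which the validity rule for code extension applies. For $\ljudge i T$ in the easy cases: $\lsemjudge i \Nat$ is immediate from $\vdash \Psi$ and $\Nat[\sigma] = \Nat$ gives substitution closure; for $S \func T$ the IH gives $\lsemvjudge i S$ and $\lsemvjudge i T$, and I obtain $\lsemjudge i {S \func T}$ by replacing each ambient weakening $\gamma$ by $\id_\Psi[\gamma]$ and invoking the substitution-closure halves of the two IHs, while substitution closure of $S \func T$ under a valid $\sigma$ is the same computation after precomposing with $\sigma$ and rewriting $S[\sigma][\gamma] = S[\sigma[\gamma]]$ via monotonicity of $\semvtyp$; the $\cont[\Delta] T$ case is identical, additionally using the escape lemma on the substitution-closure half of $\lsemvjudge 0 \Delta$ to get $\ljudge[\Phi] 0 {\Delta[\sigma]}$.

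The genuinely delicate case --- and the whole reason validity judgments were introduced --- is $(g : \Ctx) \STo T$, formed from $\ljudge[\Psi, g : \Ctx] 1 T$, where the IH provides $\lsemvjudge[\Psi, g : \Ctx] 1 T$. To prove $\lsemjudge 1 {(g : \Ctx) \STo T}$ I must produce, for every $\gamma : \Phi \To_g \Psi$ and every $\ljudge[\Phi] 0 \Gamma$, a proof of $\lsemjudge[\Phi] 1 {T[q(\gamma)][\id_\Phi, \Gamma/g]}$. The key step is that $\bigl(\id_\Psi[\gamma]\bigr), \Gamma/g$ is a valid global substitution into $\Psi, g : \Ctx$: its $\Psi$-part is valid by monotonicity of $\semvtyp$ from the valid identity, and its $g$-component is $\Gamma$, with $\ljudge[\Phi] 0 \Gamma$ given. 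Hence the substitution-closure half of $\lsemvjudge[\Psi, g : \Ctx] 1 T$ yields $\lsemjudge[\Phi] 1 {T[(\id_\Psi[\gamma]), \Gamma/g]}$, and a routine computation --- using the Naturality lemma, \Cref{lem:cvar:wk-p-n-ty}, and the algebra of global weakenings and substitutions, recognising $q(\gamma)$ as $q$ applied to the weakening substitution for $\gamma$ --- identifies $T[(\id_\Psi[\gamma]), \Gamma/g]$ with $T[q(\gamma)][\id_\Phi, \Gamma/g]$. Substitution closure of $(g : \Ctx) \STo T$ under a valid $\sigma$ is the same idea: $((g : \Ctx) \STo T)[\sigma] = (g : \Ctx) \STo T[q(\sigma)]$, the substitution $q(\sigma)$ is valid into $\Psi, g : \Ctx$ by monotonicity, and for each further weakening $\gamma'$ and context $\Gamma'$ the substitution $(\sigma[\gamma']), \Gamma'/g$ is valid, so the substitution-closure half of the IH applies and the algebra lemmas (together with \Cref{lem:cvar:gsubst-skip}) align the two sides.

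Finally, for $\ljudge i \Gamma$: the $\cdot$ case is trivial; $\Gamma, x : T$ follows from the IHs $\lsemvjudge i \Gamma$ and $\lsemvjudge i T$ by applying the semantic context-extension rule pointwise since $(\Gamma, x : T)[\sigma] = \Gamma[\sigma], x : T[\sigma]$; and the context-variable case $\ljudge i g$ (from $\vdash \Psi$ and $g : \Ctx \in \Psi$) requires, for any $\semvtyp[\Phi]{\sigma}{\Psi}$, that $\lsemjudge[\Phi] i {\sigma(g)}$, which holds because the $\semvtyp$ clauses give $\ljudge[\Phi] 0 {\sigma(g)}$, this is semantically well-formed at layer $0$ by the auxiliary fact, and then at layer $1$ (if $i = 1$) by the semantic lifting lemma. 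I expect the meta-function case to be the main obstacle: pinning down the valid-substitution construction and pushing through the global-weakening/global-substitution bookkeeping that rewrites $T[q(\gamma)][\id_\Phi, \Gamma/g]$ as $T$ applied to an actual global substitution; every other case is routine, resting on the algebra lemmas and the escape and monotonicity lemmas for $\semvtyp$.
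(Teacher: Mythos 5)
Your proposal is correct and follows essentially the same strategy as the paper's proof: mutual induction, with the substitution-closure half of each validity IH doing the real work and monotonicity of $\semvtyp$ supplying the valid substitution $\sigma[\gamma]$ (or $\id_\Psi[\gamma]$) needed to discharge the universally quantified global weakenings in the semantic well-formedness rules, followed by the same algebraic rewriting in the meta-function case to identify $T[q(\sigma)][q(\gamma)][\id,\Gamma/g]$ with $T[\sigma[\gamma],\Gamma/g]$. The only cosmetic differences are that you factor out an explicit auxiliary lemma ($\ljudge 0 T \Rightarrow \lsemjudge 0 T$) to handle the $\sigma(g)$ lookup in the context-variable case — which the paper leaves implicit under ``the cases for global contexts are simple'' — and you prefer threading $\id_\Psi[\gamma]$ through the IH rather than applying the $\lsemjudge$ monotonicity lemma after the fact; both are sound.
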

\begin{proof}
  We do induction. The cases for global contexts are simple.
  \begin{itemize}[label=Case]
  \item
    \begin{mathpar}
      \inferrule
      {\vdash \Psi}
      {\ljudge i \Nat}
    \end{mathpar}
    Assuming $\semvtyp[\Phi]{\sigma}{\Psi}$, by escape, we have
    $\typing[\Phi]{\sigma}{\Psi}$, and then $\vdash \Phi$ by presupposition. %
    We then conclude the goal. 
    
  \item
    \begin{mathpar}
      \inferrule
      {\ljudge i S \\ \ljudge i T}
      {\ljudge i {S \func T}}
    \end{mathpar}
    We assume $\semvtyp[\Phi]{\sigma}{\Psi}$. %
    We now should prove $\lsemjudge[\Phi]i{(S \func T)[\sigma]}$. %
    This can be concluded from $\lsemjudge[\Phi]i{S[\sigma]}$ and
    $\lsemjudge[\Phi]i{T[\sigma]}$. %
    They are immediate from $\lsemvjudge i S$ and $\lsemvjudge i T$ by IH. %
    
  \item
    \begin{mathpar}
      \inferrule
      {\ljudge 0 \Delta \\ \ljudge 0 T}
      {\ljudge 1 {\cont[\Delta] T}}
    \end{mathpar}
    We assume $\semvtyp[\Phi]{\sigma}{\Psi}$. %
    We now should prove $\lsemjudge[\Phi]i{(\cont[\Delta] T)[\sigma]}$. %
    We further assume $\gamma : \Phi' \To_g \Phi$. %
    The goal can be concluded from $\lsemjudge[\Phi']1{T[\sigma][\gamma]}$. %
    Since $T[\sigma][\gamma] = T[\sigma[\gamma]]$ and $\lsemvjudge 0 T$ from IH,
    we only need $\semvtyp[\Phi']{\sigma[\gamma]}{\Psi}$, which we get from
    monotonicity. %
    
  \item
    \begin{mathpar}
      \inferrule
      {\ljudge[\Psi, g : \Ctx] 1 T}
      {\ljudge 1 {(g : \Ctx) \STo T}}
    \end{mathpar}
    We assume $\semvtyp[\Phi]{\sigma}{\Psi}$. %
    We now should prove $\lsemjudge[\Phi]i{((g : \Ctx) \STo T)[\sigma]}$. %
    We further assume $\gamma : \Phi' \To_g \Phi$ and $\ljudge[\Phi'] 0 \Gamma$. %
    The goal can be concluded from $\lsemjudge[\Phi']1{T[q(\sigma)][q(\gamma)][\id,\Gamma/g]}$. %
    We compute:
    \begin{align*}
      T[q(\sigma)][q(\gamma)][\id,\Gamma/g]
      & = T[q(\sigma)[q(\gamma)]][\id,\Gamma/g] \\
      & = T[q(\sigma[\gamma])][\id,\Gamma/g] \\
      & = T[\sigma[\gamma][p(\id)], g/g][\id,\Gamma/g] \\
      & = T[(\sigma[\gamma][p(\id)] \circ (\id,\Gamma/g)), g[\id,\Gamma/g]/g] \\
      & = T[\sigma[\gamma], \Gamma/g]
    \end{align*}
    Therefore the goal becomes to prove $\lsemjudge[\Phi']1{T[\sigma[\gamma],
      \Gamma/g]}$. %
    By IH, we have $\lsemvjudge[\Psi, g : \Ctx] 1 T$. %
    We simply need $\semvtyp[\Phi']{\sigma[\gamma], \Gamma/g}{\Psi, g : \Ctx}$. %
    This is immediate from monotonicity and definition. 
  \end{itemize}
\end{proof}

\begin{theorem}[Fundamental] $ $
  \begin{itemize}
  \item If $\ltyping i t T$, then $\lsemvtyp i{t}{T}$.
  \item If $\ltyping i \delta \Delta$, then $\lsemvtyp i{\delta}{\Delta}$.
  \item If $\ltyequiv i t {t'} T$, then $\lsemvtyeq i{t}{t'}{T}$.
  \item If $\ltyequiv i \delta{\delta'}\Delta$, then $\lsemvtyeq i{\delta}{\delta'}{\Delta}$. 
  \end{itemize}
\end{theorem}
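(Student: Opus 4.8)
The plan is to prove all four clauses simultaneously by mutual induction on the typing and equivalence derivations. In each case we unfold the definitions of the two validity-closed reducibility predicates: we fix an arbitrary valid global substitution $\semvtyp[\Phi]{\sigma}{\Psi}$ and a reducible local substitution $\lsemtyeq[\Phi][\Delta]i{\delta_0}{\delta_0'}{\Gamma[\sigma]}$, and we must exhibit a reducibility witness for the doubly substituted term at $T[\sigma]$. The supporting machinery is: the Escape lemma for validity (giving $\typing[\Phi]{\sigma}{\Psi}$, hence well-formedness of all substituted data via the earlier global-substitution lemmas); Monotonicity and the PER law for the reducibility predicates; the algebraic laws relating the various substitution operations; the first Fundamental theorem above (for well-formedness of types and local contexts), to obtain $\lsemjudge[\Phi] i{T[\sigma]}$ whenever we recurse on the semantic type; and the Layering Restriction lemma, to move between the layer-$0$ and layer-$1$ predicates on STLC types.

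For the term-typing cases: a variable $x$ is handled by projecting the reducible local substitution; a global variable $u^\delta$ uses $u^\delta[\sigma] = \sigma(u)[\delta[\sigma]]$ together with the defining clause of $\semvtyp[\Phi]{\sigma}{\Psi}$ at the binding $u : (\judge T)$, which is precisely a reducibility statement about $\sigma(u)$ (at layer $0$, transported to layer $i$ by Layering Restriction when $i = 1$). The cases $\ze$ and $\su t$ follow from the congruence laws for $\Nat$ and the induction hypothesis. For $\lambda x. t$ we verify the function clause: the weak head normal form $\lambda x. (t[\sigma][q(\delta_0)])$, applied to a reducible argument $s$, weak-head reduces to $t[\sigma][q(\delta_0)][\id, s/x]$, which equals $t[\sigma][\delta_0, s/x]$ by the local-substitution algebra; this is reducible by the induction hypothesis for $t$ applied to the extended reducible local substitution, and the weak head closure law carries reducibility back to the application. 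Ordinary application $s\ t$ then falls out of the function clause. For $\boxit t$ we use the $\square$ clause: since boxed terms absorb local substitutions, it remains to discharge the reducibility clause for $\boxit{(t[\sigma])}$, whose nontrivial part is exactly the induction hypothesis for $t$ instantiated at layer $0$ with $\sigma$.

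The genuinely delicate cases are $\tletbox$ and the two meta-function constructs. For $\letbox u s t$, the induction hypothesis makes $s[\sigma][\delta_0]$ reduce to a weak head normal form $w$ reducible at its boxed type; a case split on $w$ (either $\boxit{s'}$ with $s'$ semantically a boxed term, or a neutral) lets us either fire the $\tletbox$-$\beta$ rule and invoke the induction hypothesis for the body $t$ under $\sigma$ extended by $s'/u$, or appeal to the congruence-of-neutrals law; in both subcases the weak head closure law concludes, modulo the familiar reshuffling of global weakenings ($\Gamma[p(\id)]$ versus $\Gamma[\sigma][p(\id)]$) via \Cref{lem:cvar:gsubst-skip}. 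For $\Lambda g. t$ we use the meta-function clause: extending $\sigma$ by a fresh local context $\Gamma'/g$ again yields a valid substitution into $\Psi, g : \Ctx$, so the induction hypothesis for $t$ applies, and the required type identity $T[q(\sigma)][q(\gamma)][\id, \Gamma'/g] = T[\sigma[\gamma], \Gamma'/g]$ is exactly the computation already carried out in the first Fundamental theorem. Meta-application $t~\$~\Delta$ unfolds the meta-function clause and again needs \Cref{lem:cvar:gsubst-skip} together with naturality to match $T[q(\sigma)][\id, \Delta[\sigma]/g]$ with $T[\id_{\Psi}, \Delta/g][\sigma]$.

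The equivalence clauses run in lockstep with the typing clauses: congruence rules reduce to the congruence laws plus the induction hypotheses, while the $\beta$ and $\eta$ rules reuse verbatim the weak-head-closure and substitution-commutation reasoning already used in the corresponding typing cases (compare the proof of the Global Substitutions lemma for equivalence above). I expect the main obstacle to be the bookkeeping of the interaction between global-substitution extension, global weakenings, and the layer index in the $\tletbox$ and meta-function cases --- getting every appeal to \Cref{lem:cvar:gsubst-skip}, naturality, Monotonicity, and Layering Restriction to line up --- rather than any single conceptual difficulty.
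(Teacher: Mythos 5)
Your proposal follows essentially the same route as the paper: a mutual induction unfolding the validity definitions against an arbitrary valid $\sigma$ and reducible $\delta_0$, with the $\tletbox$ case split on whether the scrutinee's weak head normal form is a $\boxit$ or a neutral, the meta-function cases handled by extending $\sigma$ and invoking \Cref{lem:cvar:gsubst-skip} plus naturality, and the global-variable case carried by the defining clause of $\semvtyp[\Phi]{\sigma}{\Psi}$ together with Layering Restriction. The only detail the paper works through that you pass over silently is the trio of base cases for local substitutions ($\cdot^m$, $\cdot_g^m$, $\wk_g^m$), where one must compute $\cdot^m[\sigma]\circ\delta = \cdot_{\widecheck\delta}^{\widehat\delta}$ and branch on $\sigma(g)$, but your framework handles them without difficulty, so this is a presentational omission rather than a gap.
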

\begin{proof}
  We again do mutual induction. %
  We focus on modal cases. %
  \begin{itemize}[label=Case]
  \item
    \begin{mathpar}
      \inferrule
      {\ltyequiv i \delta{\delta'} \Delta \\ u : (\judge[\Delta] T) \in \Psi}
      {\ltyequiv{i}{u^\delta}{u^{\delta'}}{T}}
    \end{mathpar}
    \begin{align*}
      H_0: & \lsemvtyeq i \delta{\delta'} \Delta
             \byIH \\
      H_1: & \semvtyp[\Phi]{\sigma}{\Psi}
             \tag*{(by assumption)} \\
           & \lsemtyeq[\Phi][\Delta']i{\delta''}{\delta'''}{\Gamma[\sigma]}
             \tag*{(by assumption)} \\
      H_2: & \lsemtyeq[\Phi][\Delta'] i{\delta[\sigma] \circ
             \delta''}{\delta'[\sigma] \circ \delta'''}{\Delta[\sigma]}
             \tag*{(by $H_0$)} \\
      H_3: & \lsemtyeq[\Phi][\Delta'] 0{\delta[\sigma] \circ
             \delta''}{\delta'[\sigma] \circ \delta'''}{\Delta[\sigma]}
      \tag{by layering restriction and $\Delta$ is well-formed at layer $0$}\\
           & \lsemtyeq[\Phi][\Delta']i{\sigma(u)[\delta[\sigma] \circ
             \delta'']}{\sigma(u)[\delta'[\sigma] \circ
             \delta''']}{T[\sigma[p^{1 + u}(\id)]]}
             \tag*{(by $H_1$ and $H_3$)}
    \end{align*}

  \item
    \begin{mathpar}
      \inferrule
      {\ljudge 1 \Gamma \\ \ltyequiv[\Psi][\Delta] 0{t}{t'}T}
      {\ltyequiv{1}{\boxit t}{{\boxit{t'}}}{\cont[\Delta] T}}
    \end{mathpar}
    \begin{align*}
      H_0: & \lsemvtyeq[\Psi][\Delta]0{t}{t'}{T}
             \byIH \\
      H_1: & \semvtyp[\Phi]{\sigma}{\Psi}
             \tag*{(by assumption)} \\
      H_2: & \lsemtyeq[\Phi][\Delta']1{\delta}{\delta'}{\Gamma[\sigma]}
             \tag*{(by assumption)} \\
           & \ljudge[\Phi] 1 {\Delta'}
             \tag*{(by $H_2$, escape and presupposition)} \\
      H_3: & \lsemtyeq[\Phi][\Delta'']0{\delta''}{\delta'''}{\Delta[\sigma]}
             \tag*{(by assumption)} \\
           & t = t' \tag*{(by static code)} \\
           & \lsemtyeq[\Phi][\Delta'']0{t[\sigma][\delta'']}{t'[\sigma][\delta''']}{T[\sigma]}
             \tag*{(by $H_0$, $H_1$ and $H_3$)} \\
           & \lsemtyeq[\Phi][\Delta[\sigma]]1{^\Nf \boxit t[\sigma]}{\boxit{t'}[\sigma]}{T[\sigma]}
             \tag*{(by definition)} \\
           & \ltyequiv[\Phi][\Delta']{1}{\boxit t [\sigma][\delta]}{{\boxit{t'}[\sigma][\delta']}}{\cont[\Delta] T[\sigma]}
    \end{align*}
    In the last step, notice that local substitutions do not propagate into $\tbox$. 
  \item
    \begin{mathpar}
      \inferrule
      {\ltyequiv 1 {s}{s'}{\cont[\Delta] T} \\ \ljudge[\Psi] 0 \Delta \\
        \ljudge[\Psi] 0 T \\
        \ljudge[\Psi] 1{T'} \\
        \ltyequiv[\Psi, u : (\judge[\Delta] T)][\Gamma[p(\id)]] 1 {t}{t'}{T'[p(\id)]}}
      {\ltyequiv 1 {\letbox u s t}{\letbox u {s'}{t'}} T'}
    \end{mathpar}
    \begin{align*}
      H_0: & \lsemvtyeq 1 {s}{s'}{\cont[\Delta] T}
             \byIH \\
      H_1: & \lsemvtyeq[\Psi, u : (\judge[\Delta] T)][\Gamma[p(\id)]] 1
             {t}{t'}{T'[p(\id)]}
             \byIH \\
      H_2: & \semvtyp[\Phi]{\sigma}{\Psi}
             \tag*{(by assumption)} \\
      H_3: & \lsemtyeq[\Phi][\Delta']1{\delta}{\delta'}{\Gamma[\sigma]}
             \tag*{(by assumption)} \\
           & \lsemtyeq[\Phi][\Delta']0{s[\sigma][\delta]}{s'[\sigma][\delta']}{\cont[\Delta]T[\sigma]}
             \tag*{(by $H_0$, $H_2$ and $H_3$)}
    \end{align*}
    Now, we consider $H_3$, where we know for some $w$ and $w'$
    \begin{align*}
      & \ltyreds[\Phi][\Delta']{s[\sigma][\delta]} w{\cont[\Delta[\sigma]]{(T[\sigma])}} \\
      & \ltyreds[\Phi][\Delta']{s'[\sigma][\delta]}{w'}{\cont[\Delta[\sigma]]{(T[\sigma])}} \\
      H_4: & \lsemtyeq[\Phi][\Delta']{1}{^{\Nf} w}{w'}{\cont[\Delta[\sigma]]{(T[\sigma])}}
    \end{align*}
    We then case analyze $H_4$:
    \begin{itemize}[label=Subcase]
    \item
      \begin{mathpar}
        \inferrule
        {\ltyping[\Phi][\Delta[\sigma]] 0{t''}{T[\sigma]} \\ \forall~\lsemtyeq[\Phi'][\Delta''] 0{\delta''}{\delta'''}{\Delta[\sigma]} ~.~ \lsemtyeq[\Phi'][\Delta''] 0{t''[\delta'']}{t''[\delta''']}{T}}
        {\lsemtyeq[\Phi][\Delta']{1}{^{\Nf} {\boxit{t''}}}{\boxit{t''}}{\cont[\Delta] T[\sigma]}}
      \end{mathpar}
      Then we have
      \begin{align*}
        H_5: & \semvtyp[\Phi]{\sigma, t''/u}{\Psi, u: (\judge[\Delta] T)}
               \tag*{(by definition)} \\
             & \lsemtyeq[\Phi][\Delta']1{t[\sigma,t''/u][\delta]}{t'[\sigma,t''/u][\delta']}{T'[p(\id)][\sigma,t''/u]}
               \tag*{(by $H_1$ and $H_5$)} \\
             & T'[p(\id)][\sigma,t''/u] = T'[\sigma]
               \tag*{(by computation)} \\
             & \letbox u s t[\sigma][\delta] \\
               = & \letbox u {s[\sigma][\delta]}{(t[q(\sigma)][\delta[p(\id)]])} \\
        \rightsquigarrow^\ast & \letbox u
                                {\boxit{t''}}{(t[q(\sigma)][\delta[p(\id)]])} \\
        \rightsquigarrow & t[q(\sigma)][\delta[p(\id)]][\id,t''/u] \\
        = & t[q(\sigma) \circ (\id,t''/u)][\delta[p(\id)][\id,t''/u]] \\
        = & t[\sigma,t''/u][\delta]
            \tag*{(by computation)} \\
             & \letbox u{s'}{t'}[\sigma][\delta'] \rightsquigarrow^\ast t'[\sigma,t''/u][\delta']
               \tag*{(similarly)} \\
           & \lsemtyeq[\Phi][\Delta'] 1 {\letbox u s t[\sigma][\delta]}{\letbox u {s'}{t'}[\sigma][\delta]}{T'[\sigma]}
      \end{align*}
      
    \item
      \begin{mathpar}
        \inferrule
        {\ltygneeq[\Phi][\Delta'] v{v'}{\cont[\Delta] T[\sigma]}}
        {\lsemtyeq[\Phi][\Delta'] {1}{^{\Nf} {v}}{v'}{\cont[\Delta] T[\sigma]}}
      \end{mathpar}
      Then we have
      \begin{align*}
        H_5: & \semvtyp[\Phi,u : (\judge[\Delta[\sigma]]{(T[\sigma])})]{\sigma[p(\id)],
               u^\id/u}{\Psi, u: (\judge[\Delta] T)}
               \tag*{(by monotonicity and \Cref{lem:cvar:red-ne})} \\
        H_6: & \lsemtyeq[\Phi, u : (\judge[\Delta[\sigma]]{(T[\sigma])})][\Delta'[p(\id)]]1{\delta[p(\id)]}{\delta'[p(\id)]}{\Gamma[\sigma][p(\id)]}
               \tag*{(by monotonicity)} \\
             & \Gamma[p(\id)][\sigma[p(\id)],u^\id/u] = \Gamma[\sigma[p(\id)]] =
               \Gamma[\sigma][p(\id)]
               \tag*{(by computation)} \\
             & T'[p(\id)][\sigma[p(\id)],u^\id/u] = T'[\sigma[p(\id)]] =
               T'[\sigma][p(\id)]
               \tag*{(by computation)} \\
             & \lsemtyeq[\Phi, u : (\judge[\Delta[\sigma]]{(T[\sigma])})][\Delta'[p(\id)]]1{t[\sigma[p(\id)],u^\id/u][\delta[p(\id)]]}{t'[\sigma[p(\id)],u^\id/u][\delta'[p(\id)]]}{T'[\sigma][p(\id)]}
               \tag*{(by $H_1$, $H_5$ and $H_6$)} \\
             & \ltygteq[\Phi, u :
               (\judge[\Delta[\sigma]]{(T[\sigma])})][\Delta'[p(\id)]]{t[\sigma[p(\id)],u^\id/u][\delta[p(\id)]]}{t'[\sigma[p(\id)],u^\id/u][\delta'[p(\id)]]}{T'[\sigma][p(\id)]}
               \tag*{(by \Cref{lem:cvar:semeq2geneq})} \\
             & \letbox u s t[\sigma][\delta] \\
        = & \letbox u {s[\sigma][\delta]}{(t[q(\sigma)][\delta[p(\id)]])} \\
        \rightsquigarrow^\ast & \letbox u
                                {v}{(t[q(\sigma)][\delta[p(\id)]])} 
                                \tag*{(by computation)} \\        
             & \letbox u{s'}{t'}[\sigma][\delta'] 
               \rightsquigarrow^\ast \letbox u
               {v'}{(t'[q(\sigma)][\delta'[p(\id)]])} 
               \tag*{(similarly)} \\
             & \ltygneeq[\Phi][\Delta']{\letbox u
               {v}{(t[q(\sigma)][\delta[p(\id)]])} }
               {\letbox u {v'}{(t'[q(\sigma)][\delta'[p(\id)]])} }{T[\sigma]}
               \tag*{(neutral terms)} \\
             & \lsemtyeq[\Phi][\Delta']1{\letbox u
               {v}{(t[q(\sigma)][\delta[p(\id)]])} }
               {\letbox u {v'}{(t'[q(\sigma)][\delta'[p(\id)]])} }{T[\sigma]}
               \tag*{(by \Cref{lem:cvar:red-ne})} \\
             & \lsemtyeq[\Phi][\Delta'] 1 {\letbox u s t[\sigma][\delta]}{\letbox u {s'}{t'}[\sigma][\delta]}{T'[\sigma]}
      \end{align*}
    \end{itemize}

  \item
    \begin{mathpar}
      \inferrule
      {\ljudge[\Psi] 1 \Gamma \\ \ltyequiv[\Psi, g: \Ctx][\Gamma[p(\id)]] 1 t{t'} T}
      {\ltyequiv{1}{\Lambda g. t}{\Lambda g. t'}{(g : \Ctx) \STo T}}
    \end{mathpar}
    \begin{align*}
      H_0: & \lsemvtyeq[\Psi, g: \Ctx][\Gamma[p(\id)]] 1 t{t'} T
             \byIH \\
      H_1: & \semvtyp[\Phi]{\sigma}{\Psi}
             \tag*{(by assumption)} \\
      H_2: & \lsemtyeq[\Phi][\Delta']1{\delta}{\delta'}{\Gamma[\sigma]}
             \tag*{(by assumption)} \\
      H_3: & \gamma;\tau : \Phi'; \Delta'' \To \Phi;\Delta'
             \tag*{(by assumption)} \\
      H_4: & \ljudge[\Phi']0{\Gamma'}
             \tag*{(by assumption)} \\
      H_5: & \semvtyp[\Phi']{\sigma[\gamma], \Gamma'/g}{\Psi, g: \Ctx}
             \tag*{(by monotonicity)} \\
      H_6: & \lsemtyeq[\Phi'][\Delta'']1{\delta[\gamma;\tau]}{\delta'[\gamma;\tau]}{\Gamma[\sigma][\gamma]}
             \tag*{(by monotonicity)} \\
           & \lsemtyeq[\Phi'][\Delta'']
             1{t[\sigma[\gamma],\Gamma'/g][\delta[\gamma;\tau]]}{t'[\sigma[\gamma],\Gamma'/g][\delta'[\gamma;\tau]]}{T[\sigma[\gamma], \Gamma'/g]}
             \tag*{(by $H_0$, $H_5$ and $H_6$)} \\
           & \Lambda g. t[\sigma][\delta][\gamma;\tau]~\$~\Gamma' \\
      = & \Lambda g. (t[q(\sigma[\gamma])][\delta[\gamma;\tau][p(\id)]])~\$~\Gamma' \\
      \rightsquigarrow & t[q(\sigma[\gamma])][\delta[\gamma;\tau][p(\id)]][\id,\Gamma'/g] \\
      = & t[q(\sigma[\gamma])][\id,\Gamma'/g][\delta[\gamma;\tau][p(\id)][\id,\Gamma'/g]] \\
      = & t[\sigma[\gamma],\Gamma'/g][\delta[\gamma;\tau]]
          \tag*{(by computation)} \\
           & \Lambda g. t'[\sigma][\delta'][\gamma;\tau]~\$~\Gamma' \rightsquigarrow t'[\sigma[\gamma],\Gamma'/g][\delta'[\gamma;\tau]]
             \tag*{(similarly)}\\
           & \lsemtyeq[\Phi'][\Delta'']
             1{\Lambda g. t[\sigma][\delta][\gamma;\tau]~\$~\Gamma'}{\Lambda g. t'[\sigma][\delta'][\gamma;\tau]~\$~\Gamma'}{T[\sigma[\gamma],
             \Gamma'/g]}
      \tag*{(by \Cref{lem:cvar:semred})}\\
           & \lsemtyeq[\Phi][\Delta']{1}{\Lambda g. t[\sigma][\delta]}{\Lambda g. t'[\sigma][\delta']}{(g : \Ctx) \STo T[\sigma]}
    \end{align*}
    
  \item
    \begin{mathpar}
      \inferrule
      {\ltyequiv{1}{t}{t'}{(g : \Ctx) \STo T} \\ \ljudge[\Psi] 0 \Delta}
      {\ltyequiv{1}{t~\$~\Delta}{t'~\$~\Delta}{T[\id_{\Psi}, \Delta/g]}}
    \end{mathpar}
    \begin{align*}
      H_0: & \lsemvtyeq 1 t{t'}{(g : \Ctx) \STo T}
             \byIH \\
      H_1: & \semvtyp[\Phi]{\sigma}{\Psi}
             \tag*{(by assumption)} \\
      H_2: & \lsemtyeq[\Phi][\Delta']i{\delta}{\delta'}{\Gamma[\sigma]}
             \tag*{(by assumption)} \\ 
           & \lsemtyeq[\Phi][\Delta'] 1{t[\sigma][\delta]}{t'[\sigma][\delta']}{(g :
             \Ctx) \STo T[\sigma]} \\
           & \ljudge[\Phi] 0 {\Delta[\sigma]} \\
           & \lsemtyeq[\Phi][\Delta']
             1{t[\sigma][\delta]~\$~\Delta[\sigma]}{t'[\sigma][\delta']~\$~\Delta[\sigma]}{T[\sigma][\id,\Delta[\sigma]/g]}
      \\
           & T[\sigma][\id,\Delta[\sigma]/g] = T[\id,\Delta/g][\sigma]
             \tag*{(by computation)}
    \end{align*}
    The goal is then concluded. 

  \item
    \begin{mathpar}
      \inferrule
      {\ljudge i \Gamma \\ \text{$\Gamma$ ends with $\cdot$} \\ |\Gamma| = m}
      {\ltyequiv i {\cdot^m}{\cdot^m}{\cdot}}
    \end{mathpar}
    \begin{align*}
      & \semvtyp[\Phi]{\sigma}{\Psi}
             \tag*{(by assumption)} \\
      & \lsemtyeq[\Phi][\Delta']i{\delta}{\delta'}{\Gamma[\sigma]}
             \tag*{(by assumption)}
    \end{align*}
    We must compare $\cdot^m[\sigma] \circ \delta =
    \cdot_{\widecheck{\delta}}^{\widehat\delta}$ and $\cdot^m[\sigma] \circ \delta' =
    \cdot_{\widecheck{\delta'}}^{\widehat{\delta'}}$.
    But they are immediately equal, as we can show that
    $\lsemtyeq[\Phi][\Delta']i{\delta}{\delta'}{\Gamma[\sigma]}$ implies
    $\widecheck{\delta} = \widecheck{\delta'}$ and $\widehat{\delta} = \widehat{\delta'}$.

  \item
    \begin{mathpar}
      \inferrule
      {\ljudge i \Gamma \\ g : \Ctx \in \Psi \\ \text{$\Gamma$ ends with $g$} \\ |\Gamma| = m}
      {\ltyequiv i {\cdot_g^m}{\cdot_g^m}{\cdot}}
    \end{mathpar}
    \begin{align*}
      & \semvtyp[\Phi]{\sigma}{\Psi}
        \tag*{(by assumption)} \\
      & \lsemtyeq[\Phi][\Delta']i{\delta}{\delta'}{\Gamma[\sigma]}
        \tag*{(by assumption)}
    \end{align*}
    We look up $\sigma(g)$ and consider what it ends with.
    \begin{itemize}[label=Subcase]
    \item $\sigma(g)$ ends with $\cdot$. Then we must compare
      $\cdot_g^m[\sigma]\circ \delta = \cdot_{\widecheck{\delta}}^{\widehat\delta}$ and
      $\cdot_g^m[\sigma]\circ \delta' = \cdot_{\widecheck{\delta'}}^{\widehat{\delta'}}$,
      which we know are equal.
      
    \item $\sigma(g)$ ends with some $g'$. Then we must compare
      $\cdot_g^m[\sigma]\circ \delta = \cdot_{g'}^{\widehat\delta}$ and
      $\cdot_g^m[\sigma]\circ \delta' = \cdot_{g'}^{\widehat{\delta'}}$,
      which we know are also equal.
    \end{itemize}
    
  \item 
    \begin{mathpar}
      \inferrule
      {\ljudge i \Gamma \\ g : \Ctx \in \Psi \\ \text{$\Gamma$ ends with $g$} \\ |\Gamma| = m}
      {\ltyequiv i {\wk_g^m}{\wk_g^m}{g}}
    \end{mathpar}
    \begin{align*}
      & \semvtyp[\Phi]{\sigma}{\Psi}
        \tag*{(by assumption)} \\
      H_0: & \lsemtyeq[\Phi][\Delta']i{\delta}{\delta'}{\Gamma[\sigma]}
        \tag*{(by assumption)}
    \end{align*}
    Then we have
    \[
      \wk_g^m[\sigma]\circ \delta = \id_{\sigma(g)}[id;p^m(\id)] \circ \delta =
      \wk^m_{\sigma(g)} \circ \delta
    \]
    and 
    \[
      \wk_g^m[\sigma]\circ \delta' = \id_{\sigma(g)}[id;p^m(\id)] \circ \delta' =
      \wk^m_{\sigma(g)} \circ \delta'
    \]
    We show $\lsemtyeq[\Phi][\Delta']i{\wk^m_{\sigma(g)} \circ
      \delta}{\wk^m_{\sigma(g)} \circ \delta'}{\sigma(g)}$
    by unraveling $H_0$ $m$ times. 
  \end{itemize}
\end{proof}

As a corollary of the fundamental theorems, we can prove the completeness theorem:
\begin{proof}[Proof of \Cref{thm:cvar:comp}]
  Notice that the reducibility predicates are just special cases of the validity
  judgments. 
\end{proof}

\subsection{Convertibility Checking}

In this section, we will write down the converibility checking rules and instantiate
the generic equivalence with it, proving that equivalence terms can be checked. %
We define three judgments: $\dtconv{t}{t'}{T}$ checks the convertibility of two terms
$t$ and $t'$. %
$\dtconvnf{w}{w'}{T}$ checks the convertibility of two normal forms $w$ and $w'$. %
This operation is directed by types. %
$\dtconvne{v}{v'}T$ checks the convertibility of two neutral forms $v$ and $v'$. %
This operation is structural on the neutral forms. %
We give all the rules below:
\begin{mathpar}
  \inferrule
  {\ltyreds t w T \\ \ltyreds {t'}{w'}T \\ \dtconvnf{w}{w'}T}
  {\dtconv t{t'} T}

  \inferrule
  {\ljudge[\Psi]1\Gamma}
  {\dtconvnf{\ze}{\ze}\Nat}

  \inferrule
  {\dtconv t{t'} \Nat}
  {\dtconvnf{\su{t}}{\su{t'}}\Nat}

  \inferrule
  {\dtconvne v{v'} \Nat}
  {\dtconvnf{v}{v'}\Nat}

  \inferrule
  {\dtconv[\Psi][\Gamma, x : S]{w[\id;p(\id)]~x}{w'[\id;p(\id)]~x}T}
  {\dtconvnf{w}{w'}{S \func T}}

  \inferrule
  {\ljudge[\Psi]1\Gamma \\ \ltyping[\Psi][\Delta]0 t T}
  {\dtconvnf{\boxit{t}}{\boxit{t}}{\cont[\Delta] T}}

  \inferrule
  {\dtconvne v{v'}{\cont[\Delta] T}}
  {\dtconvnf{v}{v'}{\cont[\Delta] T}}

  \inferrule
  {\dtconv[\Psi,g:\Ctx]{w[p(\id)]~\$~g}{w'[p(\id)]~\$~g}T}
  {\dtconvnf{w}{w'}{(g : \Ctx) \STo T}}

  \inferrule
  {\ljudge 1 \Gamma \\ x : T \in \Gamma}
  {\dtconvne x{x} T}

  \inferrule
  {\dtconvnf{\delta}{\delta'}\Delta \\ x : (\judge[\Delta]T) \in \Psi}
  {\dtconvne{u^\delta}{u^{\delta'}}T}

  \inferrule
  {\dtconvne{v}{v'}{S \func T} \\ \dtconv{t}{t'}S}
  {\dtconvne{v~t}{v'~t'} T}

  \inferrule
  {\dtconvne{v}{v'}{\cont[\Delta] T} \\ \ljudge 1 {T'} \\\\
    \dtconv[\Psi,u:(\judge[\Delta] T)][\Gamma[p(\id)]]{t}{t'}{T'[p(\id)]}}
  {\dtconvne{\letbox u v t}{\letbox u{v'}{t'}}{T'}}

  \inferrule
  {\dtconvne{v}{v'}{(g : \Ctx) \STo T} \\ \ljudge 0 \Delta}
  {\dtconvne{v~\$~\Delta}{v'~\$~\Delta}{T[\id,\Delta/g]}}
\end{mathpar}

We then instantiate the generic equivalence. We instantiate $\ltygteq t {t'} T$ with
$\dtconv{t}{t'}{T}$ and $\ltygneeq t {t'} T$ with $\dtconvne{t}{t'}{T}$.

Most laws are immediate. %
We discuss a few of them.
\begin{lemma}[PERs]
  All three relations above are PERs.
\end{lemma}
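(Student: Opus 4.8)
The plan is to establish symmetry and transitivity of all of the conversion judgments at once --- $\dtconv{t}{t'}{T}$, $\dtconvnf{w}{w'}{T}$, $\dtconvne{v}{v'}{T}$, together with the convertibility of local substitutions obtained by recursion on $\Delta$ --- by a single simultaneous mutual induction on their derivations. Symmetry is the easy half: in every rule we swap the two sides of the conclusion, and the remaining premises either reappear verbatim (the reduction premises $\ltyreds{t}{w}{T}$ and $\ltyreds{t'}{w'}{T}$ in the rule for $\dtconv$, and side conditions such as $\ljudge[\Psi]1\Gamma$, $\ltyping[\Psi][\Delta]0 t T$, $\ljudge 0 \Delta$, $x : T \in \Gamma$, $u : (\judge[\Delta]T)\in\Psi$) or are handed to the induction hypothesis (the term, substitution, and neutral premises). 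So symmetry follows rule by rule with no additional ingredients.

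Transitivity of $\dtconv$ is the only place a global fact is needed. Given $\dtconv{t}{t'}{T}$ and $\dtconv{t'}{t''}{T}$, inverting the single applicable rule yields $\ltyreds{t}{w_1}{T}$, $\ltyreds{t'}{w_1'}{T}$, $\dtconvnf{w_1}{w_1'}{T}$ from the first and $\ltyreds{t'}{w_2}{T}$, $\ltyreds{t''}{w_2'}{T}$, $\dtconvnf{w_2}{w_2'}{T}$ from the second. Since one-step weak head reduction is deterministic (the \emph{Uniqueness} lemma) and weak head normal forms do not reduce, the reduction sequence out of $t'$ is unique and both $w_1'$ and $w_2$ are its endpoint, so $w_1' = w_2$. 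The induction hypothesis for transitivity of $\dtconvnf$ then gives $\dtconvnf{w_1}{w_2'}{T}$, and re-applying the rule with $\ltyreds{t}{w_1}{T}$ and $\ltyreds{t''}{w_2'}{T}$ produces $\dtconv{t}{t''}{T}$.

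Transitivity of $\dtconvnf$ and $\dtconvne$ is purely structural. For $\dtconvnf{w}{w'}{T}$ and $\dtconvnf{w'}{w''}{T}$ the shared conclusion type, together with the head shape of the middle normal form $w'$ (e.g. $\ze$ versus $\su{\cdot}$ versus neutral at type $\Nat$, and the $\tbox$-rule versus the neutral rule at type $\cont[\Delta]T$, the alternatives being mutually exclusive), fixes the last rule of both derivations; we invert both, apply the appropriate induction hypothesis (for $\dtconv$, for $\dtconvnf$ at a structurally smaller term, or for $\dtconvne$), and rebuild. For $\dtconvne{v}{v'}{T}$ and $\dtconvne{v'}{v''}{T}$ the structure of the middle neutral $v'$ determines the last rule of both; the variable, global-variable, $\tletbox$, and meta-application cases close directly by the induction hypotheses, using that the data recorded in those rules ($\Delta$ from $u : (\judge[\Delta]T)\in\Psi$, the context $\Delta$ attached to a $\tletbox$, and the like) are determined by $v'$ and the ambient contexts. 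The convertibility of local substitutions is then a PER by a side induction on $\Delta$: the base cases $\cdot^m$, $\cdot_g^m$, $\wk_g^m$ have literally equal components on both sides, and the cons case reduces to the PER properties of $\dtconv$ and of substitution convertibility at a smaller $\Delta$.

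The one case I expect to be the main obstacle is function application: from $\dtconvne{v\,t}{v'\,t'}{T}$ and $\dtconvne{v'\,t'}{v''\,t''}{T}$ inversion gives $\dtconvne{v}{v'}{S\func T}$ with $\dtconv{t}{t'}{S}$, and $\dtconvne{v'}{v''}{S'\func T}$ with $\dtconv{t'}{t''}{S'}$, and to invoke the induction hypotheses on the function neutral and on the argument I need $S = S'$. I would resolve this by first proving, as part of the same mutual induction (or as a short prior lemma), soundness of the conversion judgments with respect to typing --- $\dtconvne{a}{b}{R}$ implies $\ltyping 1 a R$ and $\ltyping 1 b R$, and similarly for $\dtconv$ and $\dtconvnf$ --- together with uniqueness of types for the typing judgment, which is immediate in this essentially simply typed setting. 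Applying these to $v'$ yields $\ltyping 1 {v'}{S\func T}$ and $\ltyping 1 {v'}{S'\func T}$, hence $S = S'$, which closes the application case and therefore transitivity of $\dtconvne$; with all pieces of the mutual induction in place the lemma follows.
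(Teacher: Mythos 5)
Your proposal is correct and uses the same central idea as the paper: the only non-routine step is transitivity of $\dtconv$, which you close by invoking determinacy of (multi-step) weak head reduction to identify the middle normal form, exactly as the paper's one-line proof indicates. Your additional observation about the application case --- that transitivity of $\dtconvne$ at $v\,t$, $v'\,t'$, $v''\,t''$ requires identifying the two inferred domain types $S$ and $S'$, and that this follows from soundness of conversion plus uniqueness of types in this simply typed setting --- is a genuine subtlety that the paper's terse proof glosses over, and your proposed resolution is sound.
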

\begin{proof}
  When we prove transitivity of $\dtconv{t}{t'}{T}$, we use the uniqueness of
  multi-step weak head reduction.
\end{proof}

\begin{lemma}[Congruence of $\tbox$]
  If $\ljudge[\Psi] 1 \Gamma$, $\ltyping[\Psi][\Delta] 0 t {T}$ and $\dtconv[\Psi][\Delta]{t}{t}{T}$,
  then $\dtconv{\boxit t}{\boxit t}{T}$.
\end{lemma}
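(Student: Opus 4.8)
The plan is to derive this directly from the convertibility rules, exploiting the fact that $\boxit t$ is already a weak head normal form and that boxed code at layer $0$ is compared purely syntactically. The goal is $\dtconv{\boxit t}{\boxit t}{\cont[\Delta]{T}}$ (the type at which a boxed term is formed). First I would observe that $\boxit t$ is a weak head normal form, so the reflexive case of the reflexive--transitive closure of weak head reduction gives $\ltyreds{\boxit t}{\boxit t}{\cont[\Delta]{T}}$ in zero steps; the typing $\ltyping 1{\boxit t}{\cont[\Delta]{T}}$ needed for that judgment follows from $\ljudge[\Psi]1\Gamma$ and $\ltyping[\Psi][\Delta]0 t T$ by the typing rule for $\tbox$ (equivalently, from the preservation theorem).

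Then I would apply the introduction rule for the term-convertibility judgment with $w = w' = \boxit t$, which reduces the goal to the normal-form convertibility $\dtconvnf{\boxit t}{\boxit t}{\cont[\Delta]{T}}$, and discharge this by the dedicated rule for $\tbox$ at a contextual type, whose two premises $\ljudge[\Psi]1\Gamma$ and $\ltyping[\Psi][\Delta]0 t T$ are exactly the hypotheses of the lemma.

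Two remarks close the argument. First, the extra hypothesis $\dtconv[\Psi][\Delta]{t}{t}{T}$ is in fact not used: the convertibility rule for $\tbox$ never recurses into the body, since a boxed term carries layer-$0$ code which is identified up to syntactic equality; the hypothesis is retained only so that the lemma has the shape required when instantiating the \textsf{Congruence} law of the generic equivalence. Second, there is essentially no obstacle in this proof --- the only point of care is threading the type annotation correctly through the reflexive closure of weak head reduction, which is immediate. The lemma simply records that the hand-written convertibility rules are reflexive on boxed code, so that this instantiation of the generic equivalence is legitimate.
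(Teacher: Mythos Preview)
Your argument is correct: $\boxit t$ is a weak head normal form, the typing $\ltyping{1}{\boxit t}{\cont[\Delta] T}$ follows from the $\tbox$ rule, so the reflexive reduction step holds, and the dedicated normal-form rule for $\tbox$ (whose premises are exactly $\ljudge[\Psi]1\Gamma$ and $\ltyping[\Psi][\Delta]0 t T$) gives $\dtconvnf{\boxit{t}}{\boxit{t}}{\cont[\Delta] T}$. You are right that $\dtconv[\Psi][\Delta]{t}{t}{T}$ is not consumed by this derivation.

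One correction to your closing remark, though: the generic Congruence law for $\tbox$ carries only the two premises $\ljudge[\Psi]1\Gamma$ and $\ltyping[\Psi][\Delta]0 t T$, so the third hypothesis makes the stated lemma \emph{weaker} than the law, not a better match for it. The paper's proof is addressing that mismatch from the other side: rather than deriving the conclusion from the three hypotheses (which, as you show, is immediate), it explains that when one goes on to verify the two-premise law, the extra premise $\dtconv[\Psi][\Delta]{t}{t}{T}$ is supplied by the layer-$0$ fundamental theorem---since the generic equivalence is instantiated layer by layer, the layer-$0$ instance is already in hand when one checks the layer-$1$ box law. Your route shows this detour is unnecessary for the convertibility instantiation specifically; the paper's remark records where the hypothesis would come from in the layered bookkeeping.
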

\begin{proof}
  Notice that we are almost there, except that we must prove
  $\dtconv[\Psi][\Delta]{t}{t}{T}$ for $t$ at layer $0$. %
  This premise is met due to our layered model, where we instantiate layer $0$ and
  layer $1$ separately, so that the fundamental theorem of layer $0$ gives
  $\dtconv[\Psi][\Delta]{t}{t}{T}$. 
\end{proof}

A successful instantiation gives us the following desired completeness theorem for
converibility checking:
\begin{theorem}[Completeness]$ $
  \begin{itemize}
  \item If $\ltyequiv 1 t {t'} T$, then $\dtconv{t}{t'}{T}$.
  \item If $\ltyequiv 1 \delta{\delta'}\Delta$, then $\dtconv{\delta}{\delta'}{\Delta}$. 
  \end{itemize}
\end{theorem}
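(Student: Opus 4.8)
The plan is to observe that the convertibility-checking judgments have been introduced precisely as an instantiation of the generic equivalence of \Cref{sec:cv:logrel}: we take $\ltygteq t {t'} T$ to be $\dtconv t {t'} T$, $\ltygneeq v {v'} T$ to be $\dtconvne v {v'} T$, and the induced relation on local substitutions (by recursion on $\Delta$) to be $\dtconv \delta {\delta'} \Delta$. Once this instantiation is shown to validate every Law, all of the development built on top of the generic equivalence — the reducibility predicates, the validity judgments, and both Fundamental theorems, hence \Cref{thm:cvar:comp} — applies verbatim. Therefore from $\ltyequiv 1 t {t'} T$ we obtain $\lsemtyeq 1 t {t'} T$, and \Cref{lem:cvar:semeq2geneq} turns this into $\ltygteq t {t'} T$, i.e.\ $\dtconv t {t'} T$; the local-substitution statement is identical using \Cref{lem:cvar:semeq2geneq-lsubst}. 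So the entire content of the proof is the verification of the Laws, which we carry out at both layers, the layer-$0$ instance being as usual subsumed by the layer-$1$ one.

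Most Laws are read off the convertibility rules essentially rule by rule. Congruence and Congruence of neutrals match the algorithmic rules for $\tsucc$, the $\eta$-style normal-form rules for $\func$ and $(g:\Ctx)\STo$, and the neutral rules for local variables, $u^\delta$, application, $\tletbox$, and meta-application. The one delicate point is the congruence for $\tbox$, whose algorithmic premise asks for $\dtconv[\Psi][\Delta]{t}{t}{T}$ with $t$ at layer $0$; this is discharged by the layer-$0$ Fundamental theorem, exactly as in the separate ``Congruence of $\tbox$'' lemma, which is why the layered instantiation matters here. Weak head closure is immediate: $\dtconv w {w'} T$ unfolds to $w \reds w_1$, $w' \reds w_1'$ and $\dtconvnf {w_1}{w_1'}T$, so pre-composing with $t \reds w$ and $t' \reds w'$ and re-applying the top rule yields $\dtconv t {t'} T$. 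Monotonicity along $\gamma;\tau$ is a routine mutual induction on the three judgments, using the weakening lemmas for (multi-step) weak head reduction and the fact that the convertibility rules commute with weakening.

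The substantive Laws, and the main obstacle, are the two halves of Subsumption together with PER. For the first half, $\dtconvne v {v'} T \Rightarrow \dtconv v {v'} T$, I induct on the shape of $T$: at $\Nat$ and $\cont[\Delta] T$ the neutral is read directly by the corresponding normal-form rule, while at $S \func T$ and $(g:\Ctx)\STo T$ I must $\eta$-expand, using Monotonicity of $\dtconvne$ and the neutral application / meta-application congruences (and the variable congruence, passed through the IH at the strictly smaller argument type) before closing with the IH at the smaller type. The second half — soundness, $\dtconv t {t'} T \Rightarrow \ltyequiv 1 t {t'} T$ — is a mutual induction on the three algorithmic judgments: the reduction steps are absorbed because one-step weak head reduction is a subrelation of $\approx$ (it is built from the $\beta$ rules and congruences), and the $\eta$-expansion steps of the normal-form comparison are justified by the $\eta$ rules of the equivalence relation, including the new meta-function rule $t \approx \Lambda g. (t[p(\id)])~\$~g$. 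PER is again a mutual induction on the judgments; the only subtlety is transitivity of $\dtconv$, where, having reduced the two outer terms to weak head normal forms, one appeals to the uniqueness (determinism) of weak head reduction to identify the intermediate normal forms before invoking the IH on $\dtconvnf$ and $\dtconvne$. With the Laws in hand, the theorem follows mechanically as in the first paragraph.
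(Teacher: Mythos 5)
Your proposal is correct and follows the same strategy as the paper: instantiate the generic equivalence with $\dtconv{-}{-}{-}$ (for $\ltygteq$) and $\dtconvne{-}{-}{-}$ (for $\ltygneeq$), verify the laws of the generic equivalence, and then apply the already-established parameterized development (reducibility predicates, validity judgments, the Fundamental theorems, and \Cref{thm:cvar:comp}) together with \Cref{lem:cvar:semeq2geneq} and \Cref{lem:cvar:semeq2geneq-lsubst} to extract the convertibility judgment. The paper's exposition is terse (``Most laws are immediate'') and spells out only the two nontrivial laws — PERs, whose transitivity needs determinism of multi-step weak head reduction, and the congruence for $\tbox$, which needs the layer-$0$ instantiation of the Fundamental theorem — while you additionally elaborate the subsumption law (reflection/reification by induction on the type, absorbing $\eta$-expansion at function and meta-function types), but this is faithful detail rather than a different argument.
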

Soundness is easy by a simple induction:
\begin{theorem}[Soundness]$ $
  \begin{itemize}
  \item If $\dtconv{t}{t'}{T}$, then $\ltyequiv 1 t {t'} T$.
  \item If $\dtconvnf{w}{w'}{T}$, then $\ltyequiv 1 w {w'} T$.
  \item If $\dtconvne{v}{v'}{T}$, then $\ltyequiv 1 v {v'} T$.
  \item If $\dtconv{\delta}{\delta'}{\Delta}$, then $\ltyequiv 1 \delta{\delta'}\Delta$.
  \end{itemize}
\end{theorem}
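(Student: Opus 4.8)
The plan is to prove all four statements simultaneously by a single mutual induction on the derivations of $\dtconv t{t'}T$, $\dtconvnf w{w'}T$, $\dtconvne v{v'}T$, and $\dtconv\delta{\delta'}\Delta$, threading the presupposed well-typedness of the inputs through the induction so that in every case the two objects being compared are known to inhabit the type (or local context) at which they are compared. The three ingredients I rely on are: (i) weak head reduction is a subrelation of the syntactic equivalence, so $\ltyreds t w T$ gives $\ltyequiv 1 t w T$; (ii) the syntactic equivalence is a congruence (hence reflexive on well-typed terms) and is symmetric and transitive; and (iii) the $\beta$, $\eta$, and congruence rules of the syntactic equivalence.

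First I would handle $\dtconv t{t'}T$: from $\ltyreds t w T$, $\ltyreds{t'}{w'}T$ and the IH $\ltyequiv 1 w{w'}T$, fact~(i) yields $\ltyequiv 1 t w T$ and $\ltyequiv 1{t'}{w'}T$, and symmetry and transitivity close the case. The $\Nat$ normal-form rules (for $\ze$, for $\su t$, and for an embedded neutral) follow immediately from the matching congruence rule or the neutral IH. For the function rule the IH gives $\ltyequiv[\Psi][\Gamma, x : S]{1}{w[\id;p(\id)]~x}{w'[\id;p(\id)]~x}{T}$; applying congruence of $\lambda$ and then the $\eta$ rule for functions to $w$ and to $w'$ (using their presupposed typing at $S \func T$) produces $\ltyequiv 1 w{w'}{S \func T}$, and the meta-function rule is symmetric, using $\Lambda$-congruence and the $\eta$ rule for meta-functions. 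For the $\tbox$ normal-form rule, reflexivity of the equivalence at layer $0$ applied to the premise $\ltyping[\Psi][\Delta]0 t T$, followed by the $\tbox$ congruence rule, gives $\ltyequiv 1{\boxit t}{\boxit t}{\cont[\Delta]T}$; the embedded-neutral rule at $\cont[\Delta]T$ is just the neutral IH.

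For the neutral judgment every rule mirrors a congruence rule of the equivalence: the variable rule is congruence of local variables; the $u^\delta$ rule combines the IH on local substitutions with congruence of global variables; and the application, $\letbox$ and meta-application rules combine the neutral IH on the head with the term IHs on the remaining arguments --- in the $\letbox$ case the continuation is compared in the context extended by $u : (\judge[\Delta] T)$, which is exactly the shape of the premise of the $\letbox$ congruence rule. Finally, convertibility of local substitutions reduces by recursion on $\Delta$ to the component-wise term convertibilities, which the IH supplies and which are reassembled by the congruence rules for local substitutions.

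The only point that is not purely mechanical is the $\eta$-expansion cases at the normal-form level (arrow types and meta-function types): to invoke the $\eta$ rule together with the relevant congruence rule one needs the neutral or normal forms to be known well-typed at that arrow, respectively meta-function, type, so the statement must be read with the typing of the inputs as a standing presupposition --- equivalently, soundness is relative to well-typed inputs, as in \citet{abel_decidability_2017}. Once that bookkeeping is fixed every case is routine, because the convertibility rules were designed as syntactic shadows of the equivalence rules.
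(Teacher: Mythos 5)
Your proof is correct and is essentially what the paper has in mind when it says ``Soundness is easy by a simple induction.'' The mutual induction, the use of weak head reduction as a subrelation of the declarative equivalence, the per-case appeal to $\beta$/$\eta$/congruence rules, and in particular your explicit note that the $\eta$-expansion cases (arrow and meta-function types) require the compared normal forms to be well-typed at the type in question are exactly the right observations; the typing presupposition can be threaded through the induction as you propose, or equivalently established first as a separate presupposition lemma for the convertibility judgments (using that $\ltyreds t w T$ preserves typing, so at each recursive call into $\dtconvnf$ the normal forms inherit the required typing from the enclosing $\dtconv$ derivation).
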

This concludes our discussion about context variables.

\section{Dependent Layered Modal Type Theory}\labeledit{sec:dt}

In this section, we combine the work by \citet{hu2024layered} and what we have built
up in the previous sections and scale all the way up to dependent types. %
We present \delamlang, \textbf{De}pendent \textbf{La}yered \textbf{M}odal Type
Theory. %
With dependent types, we can not only analyze the syntax of programs, but also that of
types. %
This ability, therefore, gives us the power to write tactics that could potentially
fill in proof obligations in a proof environment. %
In particular, this type theory addresses a number of problems that we often see in
proof assistants like Coq, Lean, and Agda. %
In Coq, tactics are written in a separate language, Ltac or Ltac2, where the
advantages of dependent types in Gallina, the core language, are lost. %
Stratifying the tactic language and the core language into two also cause
duplications: there are multiple notions for natural numbers, functions, etc.. %
On the other hand, in Lean and Agda, we use reflection to convert a Lean or an Agda
term into an AST and then use the core language to manipulate the AST. %
An instrumentation in the kernel is responsible for converting this AST back to a
valid term, if type-checked. %
This mechanism superficially provides a uniform way to tactics, but reflection
generally fails to guarantee the well-formedness of ASTs, making type malformedness
run-time errors and necessitating exception mechanisms exclusively for macro executions.

We believe that this type theory provides an example to address all aforementioned
problems. %
Starting this section, let us dive into dependent types.

\subsection{Highlights}

On a high level, we would continue to apply the layering principle in \delamlang to
guide us in the design of this type theory. %
In particular, we would want the layer for static code to be subsumed by layer for
computing programs. %
Moreover, with context variables, we are now able to formulate a recursive
elimination principle for code, which was not possible in simple types. %
However, these two requirements combining together causes some high-level technical
effects to the design of the type theory, which are worth mentioning before presenting the
type theory itself.

\subsubsection{Code Promotion}

Since we are going to introduce elimination principle for terms with dependent types,
we must also consider how equivalences are handled for code. %
For example, if we know a given piece of code has type $(\lambda x. x)~\Nat$, should
we regard this type the ``same'' as $\Nat$? %
Intuitively, the answer should be yes. %
After all, we only want to capture the syntax of the term, not its type. %
Effectively, for code of type $\cont T$, it should also be regarded as an inhabitant
of another type $\cont{T'}$, as long as $T \approx T'$ in $\Gamma$. 

In the context of dependent types, however, that causes some problems in the presence
of function applications. %
Consider a function $f : \Pi(x : S). T$ and an argument $t : S$. %
Then in general, the type of $f~t$ is $T[t/x]$. %
Now, let us construct this term as code. %
Even though $t$ is constructed as part of the code, the type of the overall code
$\cont{(T[t/x])}$ contains $t$ and therefore part of the dynamics of $t$ is in fact
promoted to the type. %
For a more concrete example, if $f : \Pi(x : \Ty 0). x$ and let the argument be
$(\lambda x. x)~\Nat$, then $\boxit{(f~((\lambda x. x)~\Nat))}$ has type
$\cont{(\lambda x. x)~\Nat}$, which we agree is just $\cont{\Nat}$. %
Clearly, the argument computes and is not purely static code as in simple types. %
We cannot avoid this phenomenon because of dependent types, so we must handle it with
care. %
This phenomenon is call a \emph{code promotion}. 

\subsubsection{Non-cumulativity}

Due to code promotions, we must permit non-trivial equivalences in types of code. %
This causes particular problems when we want an elimination principle for code with
intensional analyses. %
When we split code into cases in the elimination form, we must specify in each case
how do we construct the original code from its components. %
Therefore, it is the most convenient, when each term has a unique type, leading to a
conclusion of preferring non-cumulative universes. %
Whereas with cumulative universes, types live in higher universes for free. %
Cumulativity forms a pre-order of types which cannot be captured solely by equivalence
rules and makes the typing rule for the elimination principle extremely difficult to express, if
not completely impossible.

\subsubsection{Universe Polymorphism}

Though it is often omitted in other work, universe levels and universe polymorphism
are important ingredients in dependent type theory. %
They are typically considered as ``details'' and are not very much paid attention
to. %
However, in this work, we must be explicit about universe polymorphism. %
Consider some code for function application $t~s$. %
We in general do not know the type of $s$, let alone its universe level (though it
must be uniquely determined due to non-cumulativity discussed above). %
Therefore, the elimination principle for code must work for \emph{all} universe
levels, leading to a formalism of universe polymorphism. 

\subsubsection{Tarski-style Universes}

Another ingredient to consider when approaching an elimination principle for code with
dependent types and intensional analyses is the separation between types and terms. %
Consider Russell-style universes where types and terms are not distinguished
naturally. %
It would probably suffice to say that $\cont T$ can represent code for both types and
terms. %
In particular, code for some types just has type $\cont{\Ty 0}$, for example. %
Unfortunately, this thought is too naive. %
When we consider $\cont T$ as the type of code, we are considering this type with two
indices, the (local) context $\Gamma$ and the type of the code body $T$. %
But what is the type of $T$? %
Well, it is $\cont{\Ty l}$ for some $l$, which is just a special case of $\cont T$! %
A type clearly should not be indexed by a special case of itself. %
The intertwine between types and terms in Russell-style universes seems to even
prevent a proper statement of indices of types for code. %
However, Tarski-style universes, where types and terms are clearly distinguished, introduce
mutually inductive relations between types and terms, and safely bail us out of this
problem, as we will see very soon.

\subsection{Syntax}\labeledit{sec:dt:syntax}

Let us start with the syntax of \delamlang. %
Since we employ Tarski-style universes, the syntax of terms and types are separate. %
Due to non-cumulativity, certain constructs must remember universe levels. %
Due to the elimination principle of code, some constructs must include additional
sub-structures so that the elimination eventually checks out. %
Let us begin with the subset that is basically just Martin-L\"of type theory (MLTT).
\begin{alignat*}{2}
  x, y & && \tag{Local variables} \\
  \ell & && \tag{Universe variables} \\
  l & := &&\ \ell \sep 0 \sep 1 + l \sep l \sqcup l' \sep \omega \tag{Universe levels} \\
  M, S, T & := && \ \Nat \sep \PI{l}{l'}x S T \sep \Ty l \sep \UPI \ell l T \sep \Elt l t \tag{Types, $\Typ$} \\
  s, t &:=&&\ x \tag{Terms, $\Exp$} \\
  & && \sep \Nat \sep \PI{l}{l'}x s t \sep \Ty l \tag{encoding of types}  \\
  & && \sep \ze \sep \su t \sep \ELIMN l{x.M}{s}{x, y.s'}{t} \tag{natural numbers $N$} \\
       & &&\sep \LAM l{l'} x S t \sep \APP t l{l'} x S T s \tag{dependent functions}  \\
  & && \sep \ULAM l \ell t \sep \UAPP t l \tag{universe polymorphic functions} \\
  \Gamma, \Delta &:= &&\ \cdot \sep g \sep \Gamma, x : T \at l
                        \tag{Local contexts, \Ctx} \\
  L &:= && \cdot \sep L, \ell \tag{Universe contexts}
\end{alignat*}
Due to three kinds of contexts and the scale of the system, we omit the discussion of
weakenings, which we diligently kept track of in the previous sections. %
We take various weakenings for granted. %
Nevertheless, they will appear in the semantics. %
Weakenings in general are pretty obvious, as we permit arbitrary lookups for all
variables. %
Since we now must deal with universe levels, we use $\ell$ to range over variables for
universe levels. %
The syntax for universe levels follows Agda's conventions. %
Universe levels form an idempotent commutative monoid, the laws of which we will show
in the next subsection. %
Here we use $\sqcup$ to denote taking the max of two universe levels. %
The ability to take maximum between two levels induces a partial order:
\begin{align*}
  l \le l' := l' \approx l \sqcup l'
\end{align*}
where we use $\approx$ to express the equivalence between universe levels. %
Thus, with $\le$, universe levels form a bounded inf-lattice. %
A strict order is given by requiring the pre-order to hold for the successor of $l$:
\begin{align*}
  l < l' := l' \approx (1 + l) \sqcup l'
\end{align*}
This strict order, as to be shown later, is well-founded, based on which we will give
semantics to universes. %
Due to universe polymorphism, we must also include an $\omega$ level, which will be
used to represent the universe level of a universe-polymorphic function. %
The $\omega$ level must not appear in any program, does not participate in the bounded
inf-lattice specified above, is only used in type-checking, and therefore can be
ignored most of the time. %
The formalization of universe polymorphism here follows~\citet{bezem_type_2023}
tightly. %
We use $\UPI \ell l T$ to denote the type of a universe-polymorphic function. %
It introduce a non-empty list of universe variables at once, and lives at $\Ty
\omega$. %
The type $T$ lives at $\Ty l$, where $l$ may refer to all variables from $\vect
\ell$. %
Since $l$ cannot be $\omega$, we must have all universe variables introduced in one
go. %
The introduction form is $\ULAM l \ell t$, which also introduces a non-empty list of
universe variables first and then the function body as expected. %
The elimination form $\UAPP t l$ symmetrically eliminates a universe-polymorphic
function with the same number of universe level expressions. %

The rest of the expressions are pretty much standard from MLTT. %
We have natural numbers (\Nat), their introduction forms and a recursion principle. %
We always use $M$ to exclusively represent the motives of a recursion principle. %
For regular dependent functions $\PI{l}{l'}x S T$, we must specify the universe levels
of $S$ and $T$, following~\citet{pujet_impredicative_2023}. %
We might omit the universe levels if they are not important in the discussion. %
The function abstraction $\LAM l{l'} x S t$ is standard; we might omit $l$ and $S$ if they
are not important or can be inferred from the context. %
The function application $\APP t l{l'} x S T s$ is arguably more complex. %
We explicitly specify the type of the function to prepare for a better formulation of
the elimination principle for code later in the section. %
By requiring explicit type annotations in elimination forms, types that are usually
hidden in the core syntax become 
sub-structures in the elimination form for code. %
This verbosity has no negative impact for programmers: after all, we are discussing a core theory, and
we can let a type-inferring front-end to fill in these types for the users if they
choose so. %
Following conventions, we may simply write $s~t$ if the types are not important. %

Since we are employing Tarski-style universes, as we have specified in the syntax,
types and terms are separated. %
As terms, we use \emph{encodings} of types, i.e. the overloaded $\Nat$,
$\PI l{l'} x s t$ and $\Ty l$, which are members of some universes. %
They are \emph{decoded} into actual types through $\Elt l t$, converting the encodings
to actual corresponding types. %
This part is basically identical to~\citet{palmgren_universes_1998}.  %
For simplicity, we have omitted the type lifter, which is responsible for raising the
universe levels explicitly, similar to \texttt{Lift} in Agda's standard library. %
According~\citet{palmgren_universes_1998}, the type lifter bears additional
equivalences and therefore we omit them here for conciseness, as lifting of the levels
is an orthogonal issue here. %

\subsection{Universe Levels}\labeledit{sec:dt:ulevel}

In the syntax, we deliberately group all the universe variables into a separate
context. %
This is beneficial as both local and global contexts (to be discussed later) will need to refer
to universe variables. %
It is also helpful for the future work of extending \delamlang to more
layers, by simply inserting more contexts after $L$. %
In this section, we state the well-formedness and equivalence judgments for universe
levels. %
Note that all syntactically valid universe contexts are already well-formed as they
only contain universe variables.
\begin{mathpar}
  \inferrule
  {\ell \in L}
  {\typing[L]{\ell}{\Level}}

  \inferrule
  { }
  {\typing[L]{0}{\Level}}

  \inferrule
  {\typing[L]{l}{\Level}}
  {\typing[L]{1 + l}{\Level}}
  
  \inferrule
  {\typing[L]{l}{\Level} \\ \typing[L]{l'}{\Level}}
  {\typing[L]{l \sqcup l'}{\Level}}
\end{mathpar}
Notice that $\omega$ is not well-formed. %
Indeed, the judgment $\typing[L]l\Level$ only captures the well-formed universe levels
that can be written by a programmer. %
The level $\omega$, on the other hand, only appears during type-checking to denote the
type of universe-polymorphic functions. %

As discussed above, universe levels themselves form an idempotent commutative
monoid. %
Hence they have the following equivalence rules:
\begin{mathpar}
  \inferrule
  {\typing[L]{l}{\Level}}
  {\tyequiv[L]{l}{l}\Level}
  
  \inferrule
  {\tyequiv[L]{l}{l'}{\Level}}
  {\tyequiv[L]{l'}{l}\Level}

  \inferrule
  {\tyequiv[L]{l}{l'}\Level \\ \tyequiv[L]{l'}{l''}\Level}
  {\tyequiv[L]{l}{l''}\Level}
\end{mathpar}
First we specify the basic PER rules. %
Then we have congruence rules:
\begin{mathpar}
  \inferrule
  {\tyequiv[L]{l}{l'}{\Level}}
  {\tyequiv[L]{1 + l}{1 + {l'}}\Level}

  \inferrule
  {\tyequiv[L]{l}{l'}\Level \\ \tyequiv[L]{l''}{l'''}\Level}
  {\tyequiv[L]{l \sqcup l''}{l' \sqcup l'''}\Level}  
\end{mathpar}
Finally we have the algebraic rules:
\begin{mathpar}
  \inferrule
  {\typing[L]{l}{\Level}}
  {\tyequiv[L]{l \sqcup 0}{l}\Level}
  
  \inferrule
  {\typing[L]{l}{\Level} \\ \typing[L]{l'}{\Level} \\ \typing[L]{l''}{\Level}}
  {\tyequiv[L]{(l \sqcup l') \sqcup l''}{l \sqcup (l' \sqcup l'')}\Level}

  \inferrule
  {\typing[L]{l}{\Level} \\ \typing[L]{l'}{\Level}}
  {\tyequiv[L]{l \sqcup l'}{l' \sqcup l}\Level}

  \inferrule
  {\typing[L]{l}{\Level}}
  {\tyequiv[L]{l \sqcup l}{l}\Level}

  \inferrule
  {\typing[L]{l}{\Level} \\ \typing[L]{l'}{\Level}}
  {\tyequiv[L]{1 + (l \sqcup l')}{1 + l \sqcup 1 + {l'}}\Level}

  \inferrule
  {\ell \in L}
  {\tyequiv[L]{\ell \sqcup (1 + \ell)}{1 + \ell}{\Level}}
\end{mathpar}
The second last rule is distributivity of $\tsucc$ over $\sqcup$. %
The last rule is absorption of $\tsucc$ over $\sqcup$. %
The equivalence judgment confirms the well-formedness of both components:
\begin{lemma}[Presupposition]
  If $\tyequiv[L]{l}{l'}\Level$, then $\typing[L]l\Level$ and $\typing[L]{l'}\Level$. 
\end{lemma}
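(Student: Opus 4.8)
The plan is to proceed by a routine induction on the derivation of $\tyequiv[L]{l}{l'}\Level$, inspecting the last rule applied and, in each case, reassembling well-formedness derivations $\typing[L]{l}{\Level}$ and $\typing[L]{l'}{\Level}$ from the premises together with the induction hypotheses. The point worth stressing is that the level-equivalence judgment is freestanding: it carries no local, global, or universe substructure, so no substitution machinery is needed and the entire argument is local reasoning about the level-formation rules. Essentially, this lemma is a check that every equivalence rule was stated with enough well-formedness side conditions, which it was by design.

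First I would dispatch the PER block. Reflexivity has premise $\typing[L]{l}{\Level}$ and conclusion $\tyequiv[L]{l}{l}\Level$, so both goals are immediate. For symmetry, the induction hypothesis on $\tyequiv[L]{l}{l'}\Level$ already yields the pair $\typing[L]{l}{\Level}$ and $\typing[L]{l'}{\Level}$, which are the two goals in swapped order. For transitivity, the induction hypothesis on the left premise supplies $\typing[L]{l}{\Level}$ and the one on the right premise supplies $\typing[L]{l''}{\Level}$.

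Next come the congruence rules. For the successor congruence, the induction hypothesis gives $\typing[L]{l}{\Level}$ and $\typing[L]{l'}{\Level}$, and applying the formation rule for $1+(-)$ to each yields $\typing[L]{1+l}{\Level}$ and $\typing[L]{1+l'}{\Level}$. The $\sqcup$-congruence case is handled identically, applying the $\sqcup$-formation rule to the four components delivered by the two induction hypotheses.

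Finally, the algebraic rules (unit $l \sqcup 0 \approx l$, associativity, commutativity, idempotence, distributivity of $\tsucc$ over $\sqcup$, and the absorption rule $\ell \sqcup (1+\ell) \approx 1+\ell$) require no recursive call at all: each such rule already lists $\typing[L]{-}{\Level}$ premises — or $\ell \in L$ — for its atomic constituents, and both sides of the stated equation are built from those constituents using the formation rules, together with the axiom $\typing[L]{0}{\Level}$ wherever $0$ occurs. For the absorption case, for instance, $\ell \in L$ gives $\typing[L]{\ell}{\Level}$ by the variable rule, hence $\typing[L]{1+\ell}{\Level}$, hence $\typing[L]{\ell \sqcup (1+\ell)}{\Level}$. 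There is no genuine obstacle here; the only thing to be vigilant about is confirming that no algebraic rule was written with a missing side condition — in particular that associativity and the distributivity rule carry well-formedness premises for all of $l$, $l'$, $l''$ (respectively $l$, $l'$), so that both sides can be typed — which inspection of the rules in \Cref{sec:dt:ulevel} confirms.
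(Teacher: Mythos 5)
Your proof is correct and takes exactly the approach the paper intends, which compresses to the single word ``Induction''; you have merely spelled out the case analysis that the paper leaves implicit. Nothing to add.
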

\begin{proof}
  Induction.
\end{proof}
We can prove a more general absorption rule by doing a few inductions.
\begin{lemma}[Absorption]
  If $\typing[L]l\Level$, then $\tyequiv[L]{l \sqcup (1 + l)}{1 + l}\Level$. 
\end{lemma}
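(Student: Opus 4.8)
The plan is to proceed by induction on the derivation of $\typing[L]l\Level$, using the congruence and algebraic rules for $\sqcup$ and $1 + (\cdot)$ to reduce each case to the induction hypothesis. The base case where $l = \ell$ is a universe variable is exactly the absorption rule $\tyequiv[L]{\ell \sqcup (1 + \ell)}{1 + \ell}{\Level}$ already built into the equivalence judgment. For $l = 0$, commutativity gives $\tyequiv[L]{0 \sqcup (1 + 0)}{(1 + 0) \sqcup 0}{\Level}$ and the unit law $l \sqcup 0 \approx l$ gives $\tyequiv[L]{(1 + 0) \sqcup 0}{1 + 0}{\Level}$, so transitivity closes the case. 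Throughout, I would invoke the presupposition lemma for universe levels to discharge the well-formedness side conditions attached to each algebraic rule.

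For $l = 1 + l'$, I would first use distributivity of $1 + (\cdot)$ over $\sqcup$ in reverse to rewrite $(1 + l') \sqcup (1 + (1 + l'))$ as $1 + (l' \sqcup (1 + l'))$; the induction hypothesis on $l'$ gives $\tyequiv[L]{l' \sqcup (1 + l')}{1 + l'}{\Level}$, and congruence of $1 + (\cdot)$ then yields $\tyequiv[L]{1 + (l' \sqcup (1 + l'))}{1 + (1 + l')}{\Level}$; transitivity assembles the chain. For $l = l_1 \sqcup l_2$, I would use distributivity to rewrite $1 + (l_1 \sqcup l_2)$ as $(1 + l_1) \sqcup (1 + l_2)$, so the left-hand side becomes $(l_1 \sqcup l_2) \sqcup ((1 + l_1) \sqcup (1 + l_2))$. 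A sequence of associativity and commutativity rewrites regroups this as $(l_1 \sqcup (1 + l_1)) \sqcup (l_2 \sqcup (1 + l_2))$, and the two induction hypotheses together with congruence of $\sqcup$ collapse it to $(1 + l_1) \sqcup (1 + l_2)$, which is equivalent to $1 + (l_1 \sqcup l_2)$ by distributivity again; transitivity finishes the case.

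The only mildly delicate point is the $\sqcup$ case, where the regrouping from $(l_1 \sqcup l_2) \sqcup ((1 + l_1) \sqcup (1 + l_2))$ to $(l_1 \sqcup (1 + l_1)) \sqcup (l_2 \sqcup (1 + l_2))$ requires several applications of associativity and commutativity, each carrying well-formedness obligations fed by presupposition. This is purely mechanical idempotent-commutative-monoid rewriting, so I expect no genuine obstacle; the proof is essentially a structured unfolding of the congruence, distributivity, and reordering laws already present in the system.
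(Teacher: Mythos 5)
Your proof is correct and matches the paper's argument: both proceed by induction on $\typing[L]{l}{\Level}$, and the key $\sqcup$ case is handled identically, by distributing $1+(\cdot)$ over $\sqcup$, regrouping via associativity and commutativity into $(l_1 \sqcup (1+l_1)) \sqcup (l_2 \sqcup (1+l_2))$, applying the two induction hypotheses, and distributing back. The paper simply elides the base cases and the $1 + l'$ case as routine, which you spell out.
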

\begin{proof}
  Induction. Only the following case is interesting:
  \begin{mathpar}
    \inferrule
    {\typing[L]{l}{\Level} \\ \typing[L]{l'}{\Level}}
    {\typing[L]{l \sqcup l'}{\Level}}
  \end{mathpar}
  We reason as follows:
  \begin{align*}
    l \sqcup l' \sqcup (1 +{l \sqcup l'})
    & \approx (l \sqcup (1 + l)) \sqcup (l' \sqcup (1 +{l'})) \\
    & \approx (1 + l) \sqcup (1 +{l'}) \byIH \\
    & \approx 1 +{(l \sqcup l')}
  \end{align*}
  Hence the proof is complete. 
\end{proof}
Then we generalize further:
\begin{lemma}[Absorption]
  If $\typing[L]l\Level$, then for any $n$, $\tyequiv[L]{l \sqcup (n + {l})}{n + {l}}\Level$.   
\end{lemma}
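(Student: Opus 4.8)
The plan is to proceed by induction on $n$, using the single-step absorption lemma just proved as the key building block. For the base case $n = 0$, the level $0 + l$ is just $l$, so the goal $\tyequiv[L]{l \sqcup l}{l}\Level$ is exactly the idempotence rule $l \sqcup l \approx l$, which applies since $\typing[L]l\Level$ is assumed.

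For the inductive step, I assume $\tyequiv[L]{l \sqcup (n + l)}{n + l}\Level$ and must establish $\tyequiv[L]{l \sqcup ((1 + n) + l)}{(1 + n) + l}\Level$. Writing $(1 + n) + l = 1 + (n + l)$, the idea is to reintroduce a copy of $n + l$ alongside $1 + (n + l)$: by the single-step absorption lemma applied to the level $n + l$ (which is well-formed by iterating the successor typing rule on $\typing[L]l\Level$), we have $(n + l) \sqcup (1 + (n + l)) \approx 1 + (n + l)$. Combining this, read backwards, with associativity and commutativity of $\sqcup$, the inductive hypothesis, and transitivity of $\approx$, I would compute
\begin{align*}
  l \sqcup (1 + (n + l))
  &\approx l \sqcup \big((n + l) \sqcup (1 + (n + l))\big) \\
  &\approx \big(l \sqcup (n + l)\big) \sqcup (1 + (n + l)) \\
  &\approx (n + l) \sqcup (1 + (n + l)) \byIH \\
  &\approx 1 + (n + l),
\end{align*}
which is the desired conclusion since $1 + (n + l) = (1 + n) + l$. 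Every rewriting step is justified by the PER rules, the congruence rules for $\sqcup$, and the algebraic rules (associativity, commutativity, idempotence), with the presupposition lemma used to discharge the well-formedness side conditions those rules carry.

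I do not anticipate a genuine obstacle here: the argument is a short well-founded induction that bottoms out in the $n = 1$ case already handled. The only point needing slight care is the bookkeeping of well-formedness premises — each application of an algebraic equivalence rule or of the single-step absorption lemma requires $\typing[L]{-}{\Level}$ for its subterms, and these all follow uniformly from $\typing[L]l\Level$ by repeated use of the successor rule, so they can be inserted mechanically.
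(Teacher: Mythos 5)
Your proof is correct, and it takes a genuinely different (and in one respect cleaner) route than the paper's. The paper's inductive step first rewrites $l \sqcup (1 + (n' + l))$ by applying the IH \emph{in reverse} (replacing $n' + l$ with $l \sqcup (n' + l)$), then distributes successor over $\sqcup$ to get $l \sqcup (1+l) \sqcup (1 + n' + l)$, then uses single-step absorption to collapse $l \sqcup (1+l)$, and finally closes with another application of the IH, this time at the level $1 + l$: $(1+l) \sqcup (n' + (1+l)) \approx n' + (1+l)$. That last step uses the induction hypothesis at a \emph{different} level than the one fixed at the start, so the statement being proved by induction on $n$ must in fact be silently generalized over all well-formed $l$ — a small informal gap in the paper's phrasing. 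Your argument sidesteps this entirely: you apply the single-step absorption lemma twice, both times at the single level $n + l$ (once in reverse to split $1 + (n + l)$ into $(n+l) \sqcup (1 + (n+l))$, once forward at the end), use associativity to regroup, and invoke the IH exactly once and only at the original $l$. You also never need distributivity of successor over $\sqcup$. Both proofs are short equational calculations bottoming out in idempotence and single-step absorption, but yours keeps $l$ fixed throughout, which matches the lemma statement as written, and your bookkeeping of well-formedness via iterated successor typing is accurate.
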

\begin{proof}
  We proceed by induction on $n$. %
  The cases for $n=0$ and $n=1$ are simple.  We consider the step case, where
  $n = 1 + n'$ and we know
  $\tyequiv[L]{l \sqcup (n' + {l})}{n' + {l}}\Level$. %
  We reason as follows:
  \begin{align*}
    l \sqcup (1 + (n' + l))
    & \approx l \sqcup (1 +{(l \sqcup (n' + {l}))})
      \byIH \\
    & \approx l \sqcup (1 + l) \sqcup (1 + n' + {l}) \\
    & \approx (1 + l) \sqcup (1 + n' + {l})
    \tag{by absorption} \\
    & \approx (1 + l) \sqcup (n' + (1 + l)) \\
    & \approx n' + {(1 + l)} \byIH \\
    & \approx n + l
  \end{align*}
  The proof is complete. 
\end{proof}

As readers might have noticed, the theory for universe levels are self-contained and
their equivalence is decidable, as per implemented by Agda's type-checker. %
For this reason, in the remainder of the discussion, we undermine the importance of
well-formedness and equivalence of universe levels, unless it is essential in the
surrounding context. 

Next, we define substitutions for universe levels:
\begin{align*}
  \phi := \cdot \sep \phi,l/\ell \tag{Substitutions for universe levels}
\end{align*}
\begin{mathpar}
  \inferrule
  { }
  {\typing[L]{\cdot}{\cdot}}

  \inferrule
  {\typing[L]\phi{L'} \\ \typing[L]l\Level}
  {\typing[L]{\phi, l/\ell}{L',\ell}}
\end{mathpar}
Applying substitutions is intuitive:
\begin{align*}
  \ell [\phi] &:= \phi(\ell) \\
  0 [\phi] &:= 0 \\
  1 + l [\phi] &:= 1 + (l[\phi]) \\
  l \sqcup l' [\phi] &:= (l [\phi]) \sqcup (l'[\phi]) \\
  \omega [\phi] &:= \omega
\end{align*}
We then have the following lemmas:
\begin{lemma}
  If $\typing[L]l\Level$ and $\typing[L']\phi L$, then $\typing[L']{l[\phi]}\Level$.
\end{lemma}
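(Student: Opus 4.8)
The plan is to prove this by induction on the derivation of $\typing[L]{l}{\Level}$, which by inspection of the rules amounts to a structural induction on the level expression $l$. Since $\omega$ is deliberately excluded from $\typing[L]{l}{\Level}$, only the four introduction rules (variable, $0$, successor, $\sqcup$) can occur, and each reduces to applying the corresponding rule over $L'$ once the induction hypothesis has been invoked on the immediate subterms.

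Concretely, for $l = 0$ we have $l[\phi] = 0$ and the rule for $0$ applies directly over $L'$; for $l = 1 + l'$ we have $l[\phi] = 1 + (l'[\phi])$, so the induction hypothesis gives $\typing[L']{l'[\phi]}{\Level}$ and the successor rule concludes; and for $l = l_1 \sqcup l_2$ we have $l[\phi] = (l_1[\phi]) \sqcup (l_2[\phi])$, so the induction hypothesis on both subterms followed by the rule for $\sqcup$ gives the result. The only case needing more than a one-line argument is the variable case $l = \ell$ with $\ell \in L$, where $l[\phi] = \phi(\ell)$.

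For that case I would first isolate a small auxiliary lookup lemma: if $\typing[L']{\phi}{L}$ and $\ell \in L$, then $\typing[L']{\phi(\ell)}{\Level}$. This is proved by a short induction on the derivation of $\typing[L']{\phi}{L}$: the empty case is vacuous, and in the cons case $\phi = \phi', l_0/\ell_0$ with $\typing[L']{\phi'}{L''}$ and $\typing[L']{l_0}{\Level}$, either $\ell$ is the freshly added $\ell_0$ — in which case $\phi(\ell) = l_0$ and its well-formedness is exactly the second premise — or $\ell \in L''$ and we appeal to the induction hypothesis. With this lemma the variable case of the main induction is immediate.

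I do not expect any genuine obstacle here: this is a routine substitution-stability statement, and the only mild subtlety is remembering to factor out the lookup lemma for the variable case rather than trying to inline it into the structural induction on $l$. I would therefore state and dispatch the lookup lemma first, after which the main induction is three trivial congruence steps plus one appeal to that lemma.
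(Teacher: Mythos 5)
Your proof is correct and takes the same route the paper intends (the paper's own proof is simply ``Induction''); you have just spelled out the structural induction on $l$ and correctly isolated the only nontrivial case, the variable lookup, into a short auxiliary lemma on $\typing[L']{\phi}{L}$.
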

\begin{proof}
  Induction.
\end{proof}
\begin{lemma}
  If $\tyequiv[L]l{l'}\Level$ and $\typing[L']\phi L$, then $\tyequiv[L']{l[\phi]}{l'[\phi]}\Level$.
\end{lemma}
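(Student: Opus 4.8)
The plan is to proceed by induction on the derivation of $\tyequiv[L]{l}{l'}\Level$. The observation that renders nearly every case mechanical is that level substitution commutes with the level constructors: $(1 + l)[\phi] = 1 + (l[\phi])$ and $(l \sqcup l')[\phi] = (l[\phi]) \sqcup (l'[\phi])$, straight from the definition of $\cdot[\phi]$. Consequently, once $[\phi]$ is pushed through both sides of the conclusion of the last rule, one recovers an instance of the very same rule, whose premises are discharged either by the induction hypothesis (for equivalence premises) or by the previously established well-formedness substitution lemma (for well-formedness premises).

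Spelling this out: the three PER rules are immediate — reflexivity uses the well-formedness substitution lemma on its premise $\typing[L]{l}{\Level}$, while symmetry and transitivity just re-apply the rule to the results of the induction hypothesis. The two congruence rules for $1 + \cdot$ and $\cdot \sqcup \cdot$ follow by applying the induction hypothesis to the premises and re-applying the congruence rule, using the commutation identities above to see that the conclusion has the right shape. The algebraic rules for $l \sqcup 0 \approx l$, associativity, commutativity, idempotence, and distributivity of $\tsucc$ over $\sqcup$ are likewise direct: each has only well-formedness premises on its subterms, and $\typing[L']{l[\phi]}{\Level}$ and the analogous facts follow from $\typing[L']{\phi}{L}$ by the well-formedness substitution lemma.

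The single case that is not purely syntactic, and thus the main obstacle, is the absorption rule, whose premise is $\ell \in L$ and whose conclusion is $\tyequiv[L]{\ell \sqcup (1 + \ell)}{1 + \ell}{\Level}$ — crucially restricted to a level \emph{variable} $\ell$. After substitution the conclusion becomes $\tyequiv[L']{\phi(\ell) \sqcup (1 + \phi(\ell))}{1 + \phi(\ell)}{\Level}$, and $\phi(\ell)$ is in general an arbitrary well-formed level rather than a variable, so the absorption rule cannot be re-applied. This is precisely what the general Absorption lemma proved earlier — if $\typing[L]{l}{\Level}$ then $\tyequiv[L]{l \sqcup (1 + l)}{1 + l}{\Level}$ — is for: instantiating it at the context $L'$ and the level $\phi(\ell)$, with $\typing[L']{\phi(\ell)}{\Level}$ obtained from $\typing[L']{\phi}{L}$ and $\ell \in L$, closes the case and completes the induction. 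By the same token, had the other algebraic rules been stated only for variables one would route them through their generalized versions; as stated here they already apply to arbitrary levels, so no such detour is needed.
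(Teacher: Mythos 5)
Your proof is correct and takes the same high-level route the paper sketches (``Induction.''), but you supply the one non-mechanical detail the paper leaves implicit: the absorption rule is stated only for a \emph{variable} $\ell$, and after substitution $\phi(\ell)$ is generally not a variable, so the rule itself cannot be re-applied; instead one must appeal to the earlier generalized Absorption lemma, which the paper has indeed made available precisely for this purpose. Identifying that wrinkle and closing it via the lemma is exactly right; everything else is, as you say, pushing $[\phi]$ through the constructors and invoking the well-formedness substitution lemma on side conditions.
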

\begin{proof}
  Induction.
\end{proof}

\newcommand{\adjustf}[1]{\ensuremath{\textsf{adjust}(#1)}}

The well-foundedness of the strict order $<$ is intuitive. %
The only bottom element is $0$. %
We simply keep removing $\tsucc$ from all components of $\sqcup$, and we must
eventually stop. %
Thus, the simplest way to argue the well-foundedness of $<$ is to define a measure
based on the number of $\tsucc$'s that can be removed from an $l$. %
This number is defined over all $\typing[L]l\Level$ recursively as follows:
\begin{align*}
  \countf \ell &:= \{ \ell \mapsto 0, 0 \mapsto 0 \} \\
  \countf 0 &:= \{ 0 \mapsto 0 \} \\
  \countf{1 + l} &:= \mergef{\{ \ell \mapsto 1 + n \sep \ell \mapsto n \in \countf{l}
                   \}}{\{0 \mapsto 1 + \countf{l}(0)\}} \\
  \countf{l \sqcup l'} &:= \mergef{\countf{l}}{\countf{l'}}
\end{align*}
Here \textsf{count} returns a map that counts the number of $\tsucc$'s over all
universe variables and $0$. %
The function \textsf{merge} merges two maps and takes the maximum in a conflict. %
The following definition makes sure equivalent universe levels to have the same
representation as maps:
\begin{align*}
  \adjustf m :=
  \begin{cases}
    \{ \ell \mapsto n \sep m(\ell) \mapsto n \} & \text{if $m(0) \le \max_{\ell
                                                  \mapsto n \in m} n$} \\
    m & \text{otherwise}
  \end{cases}
\end{align*}
In the first branch, we check if there is a variable which has a higher universe level
than the constant. %
If so, we drop the constant completely. %
For example, in $2 \sqcup (2 + \ell)$, $2$ is
redundant, as we know $2 + \ell$ is at least as large as $2$. %
On the other hand, in $2 \sqcup (1 + \ell)$, it is possible for $1 + \ell$
to be smaller than $2$ when we take $\ell$ as $0$. %
Therefore, in this case, we must keep $2$.  Thus, the finiteness of
decreasing steps of the universe levels can be seen as taking some finite steps by
removing all $1 +$ from maps returned by $\adjustf{\countf l}$. %
Then we can just take away variables until we can no longer descend. %
In fact, $\adjustf{\countf l}$ should be considered a normalization algorithm for
universe levels. %
We can simply compare equality between maps computed as such decide whether two
universe levels are equivalent. %
The correctness is as follows:
\begin{lemma}
  If $\tyequiv[L]l{l'}\Level$, then $\adjustf{\countf l} = \adjustf{\countf{l'}}$.
\end{lemma}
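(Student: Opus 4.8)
The plan is to factor the argument through a semantic interpretation of universe levels. For a valuation $\rho$ assigning a natural number to every universe variable, define $\llbracket \ell \rrbracket_\rho := \rho(\ell)$, $\llbracket 0 \rrbracket_\rho := 0$, $\llbracket 1 + l \rrbracket_\rho := 1 + \llbracket l \rrbracket_\rho$, and $\llbracket l \sqcup l' \rrbracket_\rho := \max(\llbracket l \rrbracket_\rho, \llbracket l' \rrbracket_\rho)$; this is well defined on every $\typing[L]l\Level$ since $\omega$ never occurs in such levels. The two pieces to prove are (A) equivalent levels have the same denotation, and (B) the map $\adjustf{\countf l}$ is determined by the denotation $\rho \mapsto \llbracket l \rrbracket_\rho$; the lemma then follows by chaining them.

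First I would establish (A): if $\tyequiv[L]l{l'}\Level$ then $\llbracket l \rrbracket_\rho = \llbracket l' \rrbracket_\rho$ for every $\rho$. This is a routine induction over the equivalence derivation. The PER rules are immediate; the congruence rules hold because $1 + (\cdot)$ and $\max(\cdot,\cdot)$ are genuine functions; and each algebraic rule ($l \sqcup 0 \approx l$, associativity, commutativity, idempotence, distributivity $1 + (l \sqcup l') \approx (1 + l) \sqcup (1 + l')$, and absorption $\ell \sqcup (1 + \ell) \approx 1 + \ell$) is simply a true identity about successor and $\max$ on $\mathbb{N}$.

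Next I would prove (B). The key intermediate fact is a normal-form-style characterization of $\countf{l}$: by induction on $l$, for every $\rho$ we have $\llbracket l \rrbracket_\rho = \max\bigl(\countf{l}(0),\ \max\{\, n + \rho(\ell) \mid (\ell \mapsto n) \in \countf{l},\ \ell \neq 0 \,\}\bigr)$, where the inner maximum is dropped when $l$ mentions no variable. This uses that $\mergef{\cdot}{\cdot}$ is the pointwise maximum of maps and that every $\countf{l}$ carries a $0$-entry (immediate from the defining clauses). From this characterization one reads $\adjustf{\countf l}$ off the denotation: the coefficient assigned to a variable $\ell$ equals $\llbracket l \rrbracket_{\rho'} - \llbracket l \rrbracket_{\rho}$ for valuations $\rho, \rho'$ differing only at $\ell$ by a sufficiently large amount, and the constant $0$-entry survives $\adjustf{\cdot}$ precisely when the value of $l$ at the zero valuation strictly exceeds every variable coefficient — which is exactly the side condition $m(0) \le \max_{\ell \mapsto n \in m} n$ that $\adjustf{\cdot}$ tests. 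Hence equal denotations yield equal $\adjustf{\countf{\cdot}}$, and composing with (A) completes the proof.

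The main obstacle I anticipate is step (B): the precise bookkeeping relating $\adjustf{\countf l}$ to $\llbracket l \rrbracket$, especially the degenerate cases — levels containing no variables, and the tie case where a variable coefficient equals the constant, which is exactly where $\adjustf{\cdot}$ elects to drop the constant and where the $\le$ (rather than $<$) in its side condition matters. An alternative, purely syntactic route is a direct induction on the equivalence derivation, checking that $\adjustf{\countf{\cdot}}$ respects each rule; but that requires auxiliary commutation lemmas — that $\adjustf{\cdot}$ commutes suitably with incrementing and with $\mergef{\cdot}{\cdot}$ — so that the induction hypotheses, which only give equality \emph{after} $\adjustf{\cdot}$, can be pushed through the congruence rules. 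The semantic route packages all of these into the triviality that the denotation is a congruence, which is why I would prefer it.
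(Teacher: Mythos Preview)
Your semantic route is correct, but the paper takes the direct syntactic induction you sketch as an alternative --- and that route is simpler than you fear. The key observation you are missing is that the \emph{stronger} claim $\countf{l} = \countf{l'}$ already holds, without $\adjustf{\cdot}$: the target structure (finite maps under $\mergef{\cdot}{\cdot}$, with the ``increment every entry'' operation for $\tsucc$) is an idempotent commutative monoid in which every defining axiom of $\approx$ is an identity --- including absorption, since $\mergef{\{\ell\mapsto 0,0\mapsto 0\}}{\{\ell\mapsto 1,0\mapsto 1\}}=\{\ell\mapsto 1,0\mapsto 1\}$. Hence the congruence cases go through by IH alone, and the commutation lemmas for $\adjustf{\cdot}$ that worried you are never needed; the paper's one-line ``Induction. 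Take advantage of the idempotent commutative monoidal nature of maximum'' is exactly this. Your semantic factoring buys a clean conceptual story (soundness of the axioms in $\mathbb{N}$ is trivial), at the cost of the injectivity argument in step (B) --- which is sound, though your formula for recovering the variable coefficient is slightly off (the difference $\llbracket l\rrbracket_{\rho'}-\llbracket l\rrbracket_{\rho}$ gives the \emph{increment} in $\rho(\ell)$, not the coefficient; you want $\llbracket l\rrbracket_{\rho}-\rho(\ell)$ for $\rho(\ell)$ large).
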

\begin{proof}
  Induction. Take advantage of the idempotent commutative monoidal nature of maximum. 
\end{proof}
This lemma ensures that the procedure respects equivalence between universe levels. %
The other direction is seen by providing a ``decoding'' function, which converts a map
to a universe level. %
We give one possible function that converts a map to a $\Level$.
\begin{align*}
  \flatten L m := \begin{cases}
    \bigsqcup_{\ell \mapsto n \in m} n + \ell & \text{if $0$ is not in $m$} \\
    m(0) \sqcup (\bigsqcup_{\ell \mapsto n \in m} n + \ell)
    & \text{otherwise}
  \end{cases}
\end{align*}
where the order of $\ell$'s respects their order in $L$ and all $\sqcup$ are right
associative. %
These requirements give a syntactically unique flattening of a map. %
Then we prove
\begin{lemma}
  If $\typing[L]l\Level$, then $\tyequiv l{\flatten{L}{\adjustf{\countf l}}}\Level$. 
\end{lemma}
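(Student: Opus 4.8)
The plan is to factor the statement through the un-adjusted map. Writing $m_l := \countf{l}$, I will prove two things separately: (i) $\tyequiv[L]{l}{\flatten{L}{m_l}}{\Level}$, and (ii) $\tyequiv[L]{\flatten{L}{m_l}}{\flatten{L}{\adjustf{m_l}}}{\Level}$; the claim then follows by transitivity of $\approx$, since $\flatten{L}{\adjustf{\countf{l}}}$ is exactly the level named in the statement. Throughout I will maintain the invariants that $\countf{l}$ always carries a $0$-entry (so no empty $\sqcup$ ever arises in $\flatten{L}{-}$) and that every map in play mentions only variables of $L$, so that $\flatten{L}{-}$ always produces a well-formed level.

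For (i) I would induct on the derivation of $\typing[L]{l}{\Level}$. The base cases $l = 0$ and $l = \ell$ are immediate: $\flatten{L}{\countf{0}} = 0$, and $\flatten{L}{\countf{\ell}} = 0 \sqcup \ell$, which is equivalent to $\ell$ by $l \sqcup 0 \approx l$ together with commutativity. The cases $1 + l$ and $l \sqcup l'$ need two structural compatibility facts about $\flatten{L}{-}$: it takes the point-wise increment of a map to $1 + (-)$, and it takes $\mergef{m}{m'}$ to $\flatten{L}{m} \sqcup \flatten{L}{m'}$. The first reduces, after repeatedly applying distributivity of $\tsucc$ over $\sqcup$, to reassociating a finite $\sqcup$. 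The second reduces, after using associativity/commutativity/idempotency to pair up the colliding entries, to the single equation $\tyequiv[L]{\max(n,n') + \ell}{(n + \ell) \sqcup (n' + \ell)}{\Level}$, which is the instance $n + \ell \le n' + \ell$ of $l \le j + l$ (when $n \le n'$) — and $l \le j + l$ is precisely an instance of the generalized Absorption lemma proved above, unfolding the definition $l \le l' := l' \approx l \sqcup l'$. Feeding these compatibility facts into the induction closes (i).

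For (ii) I would case-split on the definition of $\adjustf$. In the second branch $\adjustf{m_l} = m_l$, so the goal is reflexivity. In the first branch $\adjustf{m_l}$ deletes the $0$-entry because $m_l(0) \le \max_{\ell \mapsto n} n$; so there is a dominating entry $\ell_0 \mapsto n_0$ with $k := m_l(0) \le n_0$. Then $\flatten{L}{m_l} = k \sqcup (\bigsqcup_{\ell \mapsto n} n + \ell)$ while $\flatten{L}{\adjustf{m_l}} = \bigsqcup_{\ell \mapsto n} n + \ell$, so it suffices to show the constant summand is absorbed, i.e. $k \le \bigsqcup_{\ell \mapsto n} n + \ell$. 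I get this from $k \le n_0$ — a routine induction on the numeral $k$ using distributivity of $\tsucc$ over $\sqcup$ in the step case — together with $n_0 \le n_0 + \ell_0$ (generalized Absorption again), $n_0 + \ell_0 \le \bigsqcup_{\ell \mapsto n} n + \ell$ (a $\sqcup$ is an upper bound of its summands, from idempotency and associativity), and transitivity of $\le$, which is itself immediate from associativity of $\sqcup$. Chaining these and then combining (i) and (ii) completes the proof.

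The main obstacle is not any individual equation — each is a short calculation with the monoid laws and the Absorption lemma — but the bookkeeping around the map operations: stating the $\flatten{L}{-}$–$\mergef{-}{-}$ and $\flatten{L}{-}$–increment compatibility lemmas precisely enough (respecting the fixed variable ordering of $L$ and the always-present $0$-entry) that they slot into the induction in (i), and phrasing the $\adjustf$-drop case in (ii) so that the constant summand is visibly dominated by a concrete variable entry. Once these invariants and lemmas are set up cleanly, both inductions are essentially mechanical.
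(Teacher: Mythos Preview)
Your approach is correct and in fact more explicit than the paper's. The paper proves the claim by a single induction on $\typing[L]{l}{\Level}$ and simply says that the $\tsucc$ case uses distributivity and the $\sqcup$ case uses commutativity and absorption ``whenever necessary''; in particular it folds the $\adjustf$ reasoning into each inductive step without isolating it. Your factoring through the un-adjusted $\flatten{L}{\countf{l}}$ separates concerns cleanly: part (i) is a purely structural induction that only needs the two compatibility lemmas you state (increment and $\mergef{-}{-}$), and part (ii) is a single case analysis on $\adjustf$ that does not interact with the induction. This buys you a tidier inductive hypothesis in (i), since you never have to reason about how $\adjustf$ behaves under $\mergef{-}{-}$.

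One small correction: in part (ii) you justify $n_0 \le n_0 + \ell_0$ as ``generalized Absorption again'', but Absorption is the shape $l \le n + l$, and $n_0 + \ell_0$ is not of the form $n + n_0$ (the base is the variable $\ell_0$, not the numeral $n_0$). The step is still easy, just by a different route: $n_0 \sqcup (n_0 + \ell_0) = \tsucc^{n_0}(0) \sqcup \tsucc^{n_0}(\ell_0) \approx \tsucc^{n_0}(0 \sqcup \ell_0) \approx \tsucc^{n_0}(\ell_0) = n_0 + \ell_0$, using iterated distributivity of $\tsucc$ over $\sqcup$ and the unit law. With that fix, your argument goes through as written.
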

\begin{proof}
  We proceed by induction. %
  It is rather immediate. For the $\tsucc$ case, we use its distributivity to
  propagate it inwards. For the $\sqcup$ case, we use commutativity to rearrange levels
  within and absorption to eliminate small levels whenever necessary.
\end{proof}

\subsection{Typing and Equivalence Judgments}\labeledit{sec:dt:mltt-rules}

In this section, we introduce the typing and equivalence judgments, only for the MLTT
part. %
We will consider the modal part next altogether. %
The typing and equivalence judgments are defined mutually as usual. %
All the related judgments are:
\begin{itemize}
\item $\judge[L]\Psi$ denotes the well-formedness of the global context $\Psi$ under
  $L$. 
\item $\lpjudge i \Gamma$ denotes the well-formedness of $\Gamma$ given the universe
  context $L$ and the global context $\Psi$ at layer $i$. %
  In this section, we are not very concerned about layers yet as most parts about
  meta-programming and intensional analysis come in a later section
  (\Cref{sec:dt:modal}). 
\item $\lpequiv i \Gamma\Delta$ denotes the equivalence between two local contexts
  $\Gamma$ and $\Delta$. 
\item $\lttypwf i T l$ denotes the well-formedness of the type $T$ living in the
  universe level $l$ at layer $i$ in the given contexts. 
\item $\lttyping i t T l$ denotes the well-typedness of $t$ of type $T$, which is in
  the universe level $l$ at layer $i$.
  In the special occasion of $T$ being some $\Se$, we might write $\lttypingd i T
  l{1 + l}$ to simultaneously denote two judgments at the same time to save space.
\item $\lttypeq i{T}{T'}l$ denotes the equivalence between types $T$ and $T'$ living in the
  universe level $l$ at layer $i$ in the given contexts. 
\item $\lttyequiv i t {t'} T l$ denotes the equivalence between $t$ and $t'$ of type
  $T$, which is in the universe level $l$ at layer $i$. %
  The shorthand $\lttyequivd i T {T'} l {1 + l}$ has a meaning similar to above.
\item $\ltsubst i \delta \Delta$ denotes the well-formedness of a local substitution
  $\delta$ which substitute all local variables in $\Delta$ into terms referring to
  $\Gamma$. %
  We will introduce this judgment and the next when we discuss the modal part. %
\item $\ltsubeq i\delta{\delta'}\Delta$ denotes the equivalence between local
  substitutions $\delta$ and $\delta'$. 
\end{itemize}

The judgments for MLTT are rather routine. %
Many are just generalization of the judgments in~\Cref{sec:cv}. 
Let us first consider the well-formedness and equivalence of local contexts:
\begin{mathpar}
  \inferrule
  {\judge[L] \Psi}
  {\lpjudge i \cdot \\ \lpequiv i \cdot\cdot}

  \inferrule
  {\judge[L] \Psi \\ g : \Ctx \in \Psi}
  {\lpjudge i{g} \\ \lpequiv i{g}{g}}

  \inferrule
  {\lpjudge i \Gamma \\\\ \lttypwf i T l \\ \typing[L]l\Level}
  {\lpjudge i{\Gamma, x : T \at l}}

  \inferrule
  {\lpequiv i \Gamma\Delta \\ \lttypeq i T{T'} l \\ \lttypeq[\Psi][\Delta] i T{T'} l
    \\ \lttypwf i T l \\ \lttypwf[\Psi][\Delta] i{T'}{l'} \\ \tyequiv[L]l{l'}\Level}
  {\lpequiv i{\Gamma, x : T \at l}{\Delta, x : T' \at{l'}}}
\end{mathpar}
The well-formedness of types are also immediate, following~\citet{pujet_impredicative_2023}. %
When we encounter $\tEl$, we resort that to the typing judgment of terms. %
Overlapping rules for well-typedness of encoding as terms are also listed:
\begin{mathpar}
  \inferrule
  {\lpjudge{\typeof i} \Gamma}
  {\lttypingd i \Nat 0 {1}}

  \inferrule
  {\typing[L]l\Level  \\ \typing[L]{l'}\Level \\\\
    \lttypwf i S l \\ \lttypwf[\Psi][\Gamma, x : S \at l] i T{l'}}
  {\lttypwf i{\PI l{l'} x S T}{l \sqcup l'}}

  \inferrule
  {\typing[L]l\Level  \\ \typing[L]{l'}\Level \\\\
    \lttyping i s{\Ty l}{1 + l} \\ \lttyping[\Psi][\Gamma, x : \Elt l s \at l] i t{\Ty{l'}}{1 + l'}}
  {\lttyping i{\PI l{l'} x s t}{\Ty{l \sqcup l'}}{1 + {(l \sqcup l')}}}

  \inferrule
  {\typing[L] l \Level \\\\ \lttyping i t{\Ty l}{1 + l}}
  {\lttypwf i{\Elt l t}{l}}

  \inferrule
  {\lpjudge{\typeof i} \Gamma \\ \typing[L]l\Level}
  {\lttypingd i{\Ty l}{1 + l}{2 + l}}

  \inferrule
  {\lpjudge \metalevel \Gamma \\ \lttypwf[\Psi][\Gamma][L,\vect\ell] 1 T l \\\\ |\vect \ell| > 0 \\ \typing[L,\vect\ell]l\Level}
  {\lttypwf \metalevel {\UPI \ell l T}{\omega}}

  \inferrule
  {\lttypwf i T{l'} \\\\ \tyequiv[L]l{l'}\Level}
  {\lttypwf i{T}{l}}
\end{mathpar}
Note that all type constructors with explicitly specified universe levels must not
refer to $\omega$. %
Indeed, $\omega$ level only appears when we validate a universe-polymorphic function
and nowhere else. %
Nor can we pass around a universe-polymorphic function. %
Moreover, universe-polymorphic functions are only available at the highest layer, which is the
layer with capability to do meta-programming and recursive intensional analysis. %
This is because that universe variables must also be visible by the bindings in global
contexts. %
In the well-formedness rule for $\Nat$, we use a function $\typeof i$ which alters the
layer, in which local contexts live. %
This treatment is necessary to accommodate code promotions and permit computation in the
local contexts and on the type level. %
We give the actual definition of $\typeof i$ in \Cref{sec:dt:modal}.

The equivalence between types is composed of three parts. %
The first part is the PER rules. %
\begin{mathpar}
  \inferrule
  {\lttypeq i{T}{T'}l}
  {\lttypeq i{T'}{T}l}

  \inferrule
  {\lttypeq i{T}{T'}l \\ \lttypeq i{T'}{T''}l}
  {\lttypeq i{T}{T''}l}
\end{mathpar}
Then we have the congruence rules, which simply straightforwardly propagate
equivalence downwards:
\begin{mathpar}
  \inferrule
  {\lpjudge{\typeof i} \Gamma}
  {\lttyequivd i \Nat \Nat 0 {1}}

  \inferrule
  {\tyequiv[L]{l_1}{l_3}\Level \\ \tyequiv[L]{l_2}{l_4}\Level \\\\ \lttypeq i S{S'}{l_1} \\ \lttypeq[\Psi][\Gamma, x : S \at{l_1}] iT{T'}{l_2}}
  {\lttypeq i{\PI{l_1}{l_2} x S T}{\PI{l_3}{l_4} x{S'}{T'}}{l_1 \sqcup l_2}}

  \inferrule
  {\tyequiv[L]{l_1}{l_3}\Level \\ \tyequiv[L]{l_2}{l_4}\Level \\\\
    \lttyequiv i s{s'}{\Ty{l_1}}{1 + l_1} \\
    \lttyequiv[\Psi][\Gamma, x : \Elt{l_1} s \at{l_1}] i t{t'}{\Ty{l_2}}{1 + l_2}}
  {\lttyequiv i{\PI{l_1}{l_2} x s t}{\PI{l_3}{l_4} x{s'}{t'}}{\Ty{l_1 \sqcup l_2}}{1 + {(l_1 \sqcup l_2)}}}

  \inferrule
  {\tyequiv[L]{l}{l'}\Level \\ \lttyequiv i t{t'}{\Ty l}{1 + l}}
  {\lttypeq i{\Elt l t}{\Elt{l'}{t'}}{l}}

  \inferrule
  {\lpjudge{\typeof i}\Gamma \\ \tyequiv[L]l{l'}\Level}
  {\lttyequivd i{\Ty l}{\Ty{l'}}{1 + l}{2 + l}}

  \inferrule
  {\lttypwf[\Psi][\Gamma][L,\vect\ell] \metalevel T l \\
    \lttypeq[\Psi][\Gamma][L,\vect\ell] \metalevel T{T'} l \\\\ |\vect \ell| > 0 \\ \tyequiv[L,\vect\ell]l{l'}\Level}
  {\lttypeq \metalevel {\UPI \ell l T}{\UPI \ell{l'}{T'}}{\omega}}

  \inferrule
  {\lttypeq i T{T'}{l'} \\ \tyequiv[L]l{l'}\Level}
  {\lttypeq i{T}{T'}{l}}
\end{mathpar}

Finally, we have a number of computation rules that decode terms into types:
\begin{mathpar}
  \inferrule
  {\lpjudge{\typeof i} \Gamma}
  {\lttypeq i \Nat{\Elt 0 \Nat} 0}

  \inferrule
  {\typing[L]l\Level \\ \lpjudge{\typeof i} \Gamma}
  {\lttypeq i{\Ty l}{\Elt{1 + l}{\Ty l}}{1 + l}}
  
  \inferrule
  {\typing[L]l\Level  \\ \typing[L]{l'}\Level \\
    \lttyping i s{\Ty l}{1 + l} \\ \lttyping[\Psi][\Gamma, x : \Elt l s \at l] i t{\Ty{l'}}{1 + l'}}
  {\lttypeq i{\PI l{l'} x{\Elt l s}{\Elt{l'} t}}{\Elt{l \sqcup l'}{\PI l{l'} x s t}}{l \sqcup l'}}
\end{mathpar}
We do not have an encoding for universe-polymorphic functions, so there is not a
decoding rule for them.

Then we move on to defining the typing rules for terms. %
They are pretty much straightforward:
\begin{mathpar}
  \inferrule
  {\lpjudge{\typeof i}\Gamma \\ x : T \at l \in \Gamma}
  {\lttyping i x T l}

  \inferrule
  {\lpjudge{\typeof i}\Gamma}
  {\lttyping i \ze \Nat 0}

  \inferrule
  {\lttyping i t \Nat 0}
  {\lttyping i {\su t} \Nat 0}

  \inferrule
  {\typing[L]l\Level \\ \lttypwf[\Psi][\Gamma, x : \Nat \at 0] i M l \\
    \lttyping i s {M[\ze/x]}l \\
    \lttyping[\Psi][\Gamma, x : \Nat \at 0, y : M \at l] i {s'}{M[\su x/x]}l \\
  \lttyping i t \Nat 0}
  {\lttyping i{\ELIMN l{x.M}s{x,y. s'}t}{M[t/x]}{l}}

  \inferrule
  {\typing[L]l\Level  \\ \typing[L]{l'}\Level \\ \lttypwf i S l \\ \lttyping[\Psi][\Gamma, x : S \at l] i t{T}{l'}}
  {\lttyping i{\LAM l{l'} x S t}{\PI l{l'} x S T}{l \sqcup l'}}

  \inferrule
  {\typing[L]l\Level  \\ \typing[L]{l'}\Level \\ \lttypwf i S l \\ \lttypwf[\Psi][\Gamma, x : S \at l] i{T}{l'} \\
    \lttyping i t{\PI l{l'} x S T}{l \sqcup l'} \\ \lttyping i s S l}
  {\lttyping i{\APP t l{l'} x S T s}{T[s/x]}{l'}}

  \inferrule
  {\lttyping[\Psi][\Gamma][L, \vect \ell]\metalevel{t}{T}{l} \\\\ |\vect \ell| > 0 \\ \typing[L,\vect\ell]l\Level}
  {\lttyping \metalevel {\ULAM l \ell t}{\UPI \ell l T}{\omega}}

  \inferrule
  {\lttypwf[\Psi][\Gamma][L,\vect\ell] \metalevel T l \\
    \lttyping \metalevel{t}{\UPI \ell l T}{\omega} \\\\ |\vect\ell| = |\vect l| > 0 \\ \forall l' \in \vect l ~.~ \typing[L]{l'}\Level}
  {\lttyping \metalevel{\UAPP t l}{T[\vect l/\vect \ell]}{l[\vect l/\vect \ell]}}

  \inferrule
  {\lttyping i t{T'}l \\ \lttypeq{\typeof i}T{T'}l}
  {\lttyping i t T l}

  \inferrule
  {\lttyping i tT{l'} \\ \tyequiv[L]l{l'}\Level}
  {\lttyping i t T l}
\end{mathpar}

The equivalence rules for terms are also composed of three parts. %
The PER rules are immediate:
\begin{mathpar}
  \inferrule
  {\lttyequiv i t{t'}T l}
  {\lttyequiv i{t'}tT l}

  \inferrule
  {\lttyequiv i t{t'}T l \\ \lttyequiv i{t'}{t''}T l}
  {\lttyequiv i t{t''}T l}
\end{mathpar}
The congruence rules are naturally induced by the typing rules:
\begin{mathpar}
  \inferrule
  {\lpjudge{\typeof i}\Gamma \\ x : T \at l \in \Gamma}
  {\lttyequiv i x x T l}

  \inferrule
  {\lpjudge{\typeof i}\Gamma}
  {\lttyequiv i \ze \ze \Nat 0}

  \inferrule
  {\lttyequiv i t{t'} \Nat 0}
  {\lttyequiv i {\su t}{\su{t'}} \Nat 0}

  \inferrule
  {\tyequiv[L]l{l'}\Level \\ \lttypeq[\Psi][\Gamma, x : \Nat \at 0] i M{M'} l \\
    \lttyequiv i {s_1}{s_3} {M[\ze/x]}l \\
    \lttyequiv[\Psi][\Gamma, x : \Nat \at 0, y : M \at l] i{s_2}{s_4}{M[\su x/x]}l \\
  \lttyequiv i t{t'} \Nat 0}
  {\lttyequiv i{\ELIMN l{x.M}{s_1}{x,y. s_2}t}{\ELIMN{l'}{x.M'}{s_3}{x,y. s_4}{t'}}{M[t/x]}{l}}

  \inferrule
  {\tyequiv[L]{l_1}{l_3}\Level  \\ \tyequiv[L]{l_2}{l_4}\Level \\ \lttypwf i S{l_1} \\ \lttypeq i
    S{S'}{l_1} \\
    \lttyequiv[\Psi][\Gamma, x : S \at {l_1}] i t{t'}{T}{l_2}}
  {\lttyequiv i{\LAM {l_1}{l_2} x S t}{\LAM {l_3}{l_4} x{S'}{t'}}{\PI l{l'} x S T}{l_1 \sqcup l_2}}

  \inferrule
  {\tyequiv[L]{l_1}{l_3}\Level  \\ \tyequiv[L]{l_2}{l_4}\Level \\ \lttypwf i S{l_1} \\ \lttypeq i S{S'}{l_1}
    \\
    \lttypeq[\Psi][\Gamma, x : S \at{l_1}] i{T}{T'}{l_2} \\
    \lttyequiv i t{t'}{\PI {l_1}{l_2} x S T}{l_1 \sqcup l_2} \\ \lttyequiv i s{s'} S{l_1}}
  {\lttyequiv i{\APP t {l_1}{l_2} x S T s}{\APP{t'} {l_3}{l_4} x {S'}{T'}{s'}}{T[s/x]}{l_2}}

  \inferrule
  {\lttyequiv[\Psi][\Gamma][L, \vect \ell]\metalevel{t}{t'}{T}{l} \\ |\vect \ell| > 0 \\ \tyequiv[L,\vect\ell]l{l'}\Level}
  {\lttyequiv \metalevel {\ULAM l \ell t}{\ULAM {l'} \ell{t'}}{\UPI \ell l T}{\omega}}

  \inferrule
  {\lttypwf[\Psi][\Gamma][L,\vect\ell] \metalevel T l \\
    \lttyequiv \metalevel{t}{t'}{\UPI \ell l T}{\omega} \\ |\vect\ell| = |\vect l| = |\vect
    l'| > 0 \\
    \forall 0 \le n < |\vect l| ~.~ \tyequiv[L]{\vect l(n)}{\vect l'(n)}\Level}
  {\lttyequiv \metalevel{\UAPP t l}{t'~\$~\vect l'}{T[\vect l/\vect \ell]}{l[\vect l/\vect \ell]}}

  \inferrule
  {\lttyequiv i t{t'}{T'}l \\ \lttypeq{\typeof i}T{T'}l}
  {\lttyequiv i t {t'} T l}

  \inferrule
  {\lttyequiv i t{t'}T{l'} \\ \tyequiv[L]l{l'}\Level}
  {\lttyequiv i t{t'}T l}
\end{mathpar}

\subsection{Meta-programming Modalities}\labeledit{sec:dt:modal}

In this part, we introduce the modalities for meta-programming and intensional
analysis. %
We use the $\square$ modality to represent the type of
code and we use layers to control the computational behaviors of the type theory. %
However, there are two points that we need to pay attention to:
\begin{enumerate}
\item Previously, we have discussed code promotions. %
  Code promotions imply that we must permit computation of code on the type level. %
  This further implies that we must introduce an intermediate layer between layer for code and
  that for meta-programs, which restricts the language to still MLTT but permits computation. 
\item Due to Tarski-style universes, we must introduce two kinds of contextual types,
  one for types and one for terms. %
  As seen in the syntax of types and terms, they are mutually defined, so the
  recursive principles for code of types and terms must also be mutual.
\end{enumerate}

\begin{table}[]
\begin{tabular}{|l|l|l|l|l|}
\hline
Layer            & $\varlevel$ & $\codelevel$  & $\proglevel$  & $\metalevel$                          \\ \hline
Language         & Variables only &MLTT & MLTT & MLTT extended with meta-programming \\ \hline
Computation      & No & No   & Yes  & Yes                          \\ \hline
Meta-programming & No & No   & No  & Yes                          \\ \hline
Layer of types   & $\proglevel$ & $\proglevel$  & $\proglevel$  & $\metalevel$                          \\ \hline
\end{tabular}
\caption{Features at each layer}\labeledit{tab:dt:feat}
\end{table}

Having set up the basic theme, let us begin with the syntax of the extension to MLTT:
\begin{alignat*}{2}
  i &:= &&\ \varlevel \sep \codelevel \sep \proglevel \sep \metalevel \tag{Layers} \\
  U & && \tag{Global variables as types} \\
  u & && \tag{Global variables as terms} \\
  k & && \tag{Natural numbers, $\N$} \\
  \delta &:=&& \cdot_{g?}^k \sep \wk_g^k \sep \delta, t/x \tag{Local substitutions} \\
  B &:=&&\ g : \Ctx \sep U : \DTyp i l \sep u : \DTrm i T l
          \tag{Global bindings} \\
  \Phi, \Psi &:= &&\ \cdot \sep \Phi, B \tag{Global contexts} \\
  S, T & := && \cdots \sep U^\delta \sep \CPI g l T \sep \TPI U l{l'}T \sep \CTyp l \sep \CTrm T l \\
  s, t & := && \cdots \sep u^\delta \sep \CLAM l g t \sep \CAPP t \Gamma
               \sep \TLAM l{l'} U t \sep \TAPP t T
               \sep \boxit T \sep \boxit t \\
    & && \sep \LETBTYP{l'}l\Gamma{x_T. M}{U}{t'}t \sep \LETBTRM{l'}l\Gamma
         T{x_t. M}{u}{t'}t \\
    & && \sep \ELIMTYP{l_1}{l_2}Mbl\Gamma t \sep \ELIMTRM{l_1}{l_2}Mbl\Gamma T t \\
  \vect M & := &&\ (\ell,g,x_T. M_\Typ)~(\ell,g,U_T,x_t. M_\Trm)
                  \tag{Two motives for mutual recursion of code} \\
  \vect b &:= && \vect{b}_\Typ~\vect{b}_\Trm
                 \tag{Branches for mutual recursion of code} \\
  b_\Typ &:= &&\ (g. t_\Nat) \sep (\ell,\ell',g, U_S,U_T,x_S,x_T. t_\Pi) \sep (\ell,g. t_{\Se})
                \sep (\ell,g,u_t,x_t. t_\tEl)
                       \tag{Branches for code of types} \\
  b_\Trm &:= &&\ (\ell,g,U_T,u_x. t_x) \sep (g.t'_\Nat) \sep (\ell,\ell',g,u_s,u_t,x_s,x_t.t'_\Pi) \sep
                (\ell,g.t'_\Se) 
                \tag{Branches for code of terms} \\
    & && \sep (g.t_\ze) \sep (g,u_t,x_t.t_{\tsucc}) \sep
         (\ell,g,U_M,u_s,u_{s'},u_t,x_M,x_s,x_{s'},x_t.t_{\telimn}) \\
    & && \sep (\ell,\ell',g,U_S,U_T,u_t,x_S,x_t.t_\lambda) \sep (\ell,\ell',g,U_S,U_T,u_t,u_s,x_S,x_T,x_t,x_s.t_\tapp)
\end{alignat*}
Following the layering principle before, we index our judgments with an layer index
$i$. %
We include four layers and these layers are summarized in \Cref{tab:dt:feat}. %
To elaborate, we begin with layer $\varlevel$, which is the layer that contains only
variables. %
This layer is needed to describe the base case of the recursive principles for code
when a local variable is hit. %
Layer $\varlevel$ (for variables) is \emph{not} available for most rules given in
\Cref{sec:dt:mltt-rules} other than the local variable rule and its congruence. %
We will follow this convention for the rest of this report, unless we specifically
state that layer $\varlevel$ is available for particular rules. %
Layer $\codelevel$ (for code) is the layer for code of MLTT. %
This layer is akin to layer $0$ in \Cref{sec:cv,sec:cv:logrel}, where static code
resides and no computation is allowed. %
However, in order to capture code promotions, we must introduce another layer, $\proglevel$ (for
programs), to permit computation of MLTT \emph{programs} in local contexts and on the
type level. %
This layer is especially crucial in the recursive principle for terms for the argument
$T$ where code promotions are implicitly handled by definitional equivalence. %
However, no meta-programs are allowed at layer $\proglevel$; in other words, the language at
layer $\proglevel$ is virtually vanilla MLTT. %
Therefore, unlike simple types, there are two layers in \delamlang permitting
computation. %
At last, we have layer $\metalevel$ (for meta-programs) where we have the power to do
meta-programming. %
At this layer, we have access to not only universe-polymorphic functions, but also
contextual types and recursive principles for code. %
All meta-functions must reside at this layer. %
All layers are related by a strict order of $\varlevel < \codelevel < \proglevel < \metalevel$. %

The reason to introduce layer $\proglevel$ becomes obvious by considering which layer the type
of a given MLTT term should live in. %
For instance, given a judgment $\lttyping \codelevel t T l$, we know $t$ lives at layer $\codelevel$ as
code, but what about $T$? %
Since $T$ is a type and we want $T$ to compute, it cannot live at layer $\codelevel$, but also
not $\metalevel$ as it must be a well-formed pure MLTT type. %
Indeed, $T$ ought to live at layer $\proglevel$, i.e. $\lttypwf \proglevel T l$. %
What about $\lttyping \proglevel t T l$? %
In this case, $t$ lives at layer $\proglevel$. %
Since $T$ must still be a well-formed pure MLTT type and compute, we must have
$\lttypwf \proglevel T l$. %
The type of a term living at layer $\metalevel$ simply also lives at layer $\metalevel$.  %
The relation of layers of terms and types is summarized by the following function:
\begin{align*}
  \typeof \varlevel &:= \proglevel \\
  \typeof \codelevel &:= \proglevel \\
  \typeof \proglevel &:= \proglevel \\
  \typeof \metalevel &:= \metalevel
\end{align*}
The judgment $\compt i$ quantifies computable layers:
\begin{mathpar}
  \inferrule
  { }
  {\compt \proglevel}

  \inferrule
  { }
  {\compt \metalevel}
\end{mathpar}

Then we extend our system with five types, following~\Cref{sec:cv}:
\begin{itemize}
\item $U^\delta$ is a global variable for types. %
  Due to separation of types and terms, we need a way to refer to code of types on the
  type level. 
\item $\CPI g l T$ is a meta-function type for introducing a context variable $g$
  to the global context. %
  We also have this in~\Cref{sec:cv}.
\item $\TPI U l{l'}T$ is similarly a meta-function type for introducing a type at
  layer $\proglevel$ to the global context. %
  This type is introduced to provide an index for the contextual type for terms to be
  discussed in the second next item.
\item $\CTyp l$ is a contextual type for types in MLTT. %
  It represents a static code of types. %
\item Finally, $\CTrm T l$ is a contextual type for terms in MLTT. %
  This $T$ may refer to the index type at layer $\proglevel$ introduced by meta-functions above. 
\end{itemize}

Since there are two kinds of contextual types now, there four kinds of bindings in a
global context:
\begin{itemize}
\item context variables $g : \Ctx$ representing a local context;
\item global variables $U : \DTyp i l$ representing a type in MLTT (note that $i \in
  \{\codelevel, \proglevel\}$);
\item global variables $u : \DTrm i T l$ representing a term in MLTT (note that
  $i \in \{\varlevel, \codelevel\}$ and there is no way to introduce a term at layer $\proglevel$ to global
  context).
\end{itemize}

Now, let us move on to discuss the extended terms. %
\begin{itemize}
\item First, we also introduce global variables and local substitutions. %
  Their syntax is identical to one in \Cref{sec:cv}. 
\item Then we have the introduction and elimination forms for meta-functions of
  context variables, $\CLAM l g t$ and $\CAPP t \Gamma$. 
\item Similarly, we have the introduction and elimination forms for meta-functions of
  types, $\TLAM l{l'} U t$ and $\TAPP t T$. 
\item Then we have the introduction forms of two kinds of contextual types.
\item Following \citet[Sec. 4]{hu2024layered}, we have two elimination forms for each kind of
  contextual types, $\tletbox$ and the recursive principles. %
  Same as before, $\tletbox$ is responsible for code composition and evaluation. %
  Intentional analyses are done through the recursive principles. %
  In \delamlang, $\tletbox$ is a bit more complex as it requires a specified motive. %
  We alter the syntax a little bit to make $\tletbox$ more like an operation:
  $\LETBTYP{l'}l\Gamma{x_T. M}{u}{t'}t$ and $\LETBTRM{l'}l\Gamma T{x_t. M}{u}{t'}t$.

\item Finally, we extend the recursive principles for code. %
  As indicated before, code of types and terms in MLTT are mutually defined, so the
  recursive principles must also be mutual. %
  The two recursive principles $\ELIMTYP{l_1}{l_2}Mbl\Gamma t$ and
  $\ELIMTRM{l_1}{l_2}Mbl\Gamma T t$ require two motives, one for types and one for terms,
  and contain all branches for code of types and terms. %
  Their difference is what exactly eventually being eliminated, as indicated by their
  subscript. %
  The branches $\vect b$ are a list of branches $\vect b_\Typ$ and $\vect b_\Trm$,
  where $\vect b_\Typ$ and $\vect b_\Trm$ contain \emph{all} branches for types and
  terms, respectively. %

\end{itemize}
In the branches, there are four kinds of variables.
\begin{itemize}
\item There is a globally introduced context variable $g$ which represents the
  local context where the code lives.
\item There could be some universe variables that are used to tell the universe levels
  of some types.
\item There could be some global variables $u$ and $U$ represent the sub-structures of
  a given case. %
  The subscripts correspond tightly to the syntax given in \Cref{sec:dt:syntax}.
\item For each sub-structure, there is one corresponding recursive variable $x$. %
  Again, the subscripts correspond tightly to the sub-structure. 
\end{itemize}

\subsection{More Typing and Equivalence Judgments}

In this section, we specify the remainder of the rules. %
We begin with the well-formedness rule for the global contexts. %
Recall that layer $\varlevel$ does not apply for most rules below, unless the otherwise is
specifically stated.
\begin{mathpar}
  \inferrule
  { }
  {\judge[L]\cdot}

  \inferrule
  {\judge[L]\Psi}
  {\judge[L]{\Psi, g : \Ctx}}

  \inferrule
  {\judge[L]\Psi \\ \lpjudge \proglevel \Gamma \\\\ \typing[L]l\Level \\ i \in \{\codelevel, \proglevel\}}
  {\judge[L]{\Psi, U : \DTyp i l}}

  \inferrule
  {\judge[L]\Psi \\ \lttypwf \proglevel T l \\\\ \typing[L]l\Level \\ i \in \{\varlevel, \codelevel\}}
  {\judge[L]{\Psi, u : \DTrm i T l}}
\end{mathpar}
The judgments for local substitutions follow very closely \Cref{sec:cv}. %
In these rules, $i$ might take $\varlevel$. %
This permission has a particular effect on the step case, which forces all terms in a
local substitution must be variables. %
\begin{mathpar}
  \inferrule
  {\lpjudge{\typeof i}\Gamma \\\\ \text{$\Gamma$ ends with $\cdot$} \\ |\Gamma| = k}
  {\ltsubst i {\cdot^k}{\cdot} \\ \ltsubeq i {\cdot^k}{\cdot^k}{\cdot}}

  \inferrule
  {\lpjudge{\typeof i}\Gamma \\ g : \Ctx \in \Psi \\\\ \text{$\Gamma$ ends with $g$} \\ |\Gamma| = k}
  {\ltsubst i {\cdot_g^k}{\cdot} \\ \ltsubeq i {\cdot_g^k}{\cdot_g^k}{\cdot}}

  \inferrule
  {\lpjudge{\typeof i}\Gamma \\ g : \Ctx \in \Psi \\ \text{$\Gamma$ ends with $g$} \\ |\Gamma| = k}
  {\ltsubst i {\wk_g^k}{g} \\ \ltsubeq i {\wk_g^k}{\wk_g^k}{g}}

  \inferrule
  {\typing[L]l\Level \\ \ltsubst i {\delta}{\Delta} \\ \lttypwf[\Psi][\Delta]{\typeof i}T l \\ \lttyping i {t}{T[\delta]} l}
  {\ltsubst i {\delta, t/x}{\Delta, x : T \at l}}

  \inferrule
  {\typing[L]l\Level \\ \ltsubeq i {\delta}{\delta'}{\Delta} \\ \lttypwf[\Psi][\Delta]{\typeof i}T l \\
    \lttyping i {t}{T[\delta]} l \\ \lttyping i {t'}{T[\delta]} l \\ \lttyequiv i {t}{t'}{T[\delta]} l}
  {\ltsubeq i {\delta, t/x}{\delta', t'/x}{\Delta, x : T \at l}}
\end{mathpar}
In the step case for equivalence above, we ask for two redundant premises of the
well-typedness of $t$ and $t'$ to provide an early presupposition for equivalence of
local substitutions. %
We will need this early presupposition in \Cref{lem:dt:lsubst-eq}. %
It breaks the dependency loop so that reaching the full presupposition lemma becomes
viable. 

Now let us consider the extended types and their equivalence.
\begin{mathpar}
  \inferrule
  {\lpjudge{\typeof i}\Gamma \\ U : \DTyp[\Delta]{i'} l \in \Psi \\ i' \in \{\codelevel, \proglevel\} \\ i' \le i \\ \ltsubst i \delta \Delta}
  {\lttypwf i{U^\delta}{l}}

  \inferrule
  {\lpjudge \metalevel\Gamma \\\\ \lttypwf[\Psi, g : \Ctx] \metalevel T l \\ \typing[L]l\Level}
  {\lttypwf \metalevel{\CPI g l T}{l}}

  \inferrule
  {\lpjudge \metalevel\Gamma \\ \lttypwf[\Psi, U : \DTyp[\Delta] \proglevel l] \metalevel T{l'} \\\\ \typing[L]l\Level \\ \typing[L]{l'}\Level}
  {\lttypwf \metalevel{\TPI U[\Delta] l{l'} T}{l'}}

  \inferrule
  {\lpjudge \metalevel\Gamma \\ \lpjudge \proglevel \Delta \\ \typing[L]l\Level}
  {\lttypwf \metalevel{\CTyp[\Delta]l}{0}}

  \inferrule
  {\lpjudge \metalevel\Gamma \\ \lttypwf[\Psi][\Delta] \proglevel T l \\ \typing[L]l\Level}
  {\lttypwf \metalevel{\CTrm[\Delta]T l}{0}}
\end{mathpar}
Notice that contextual types always live on the universe level $0$. %
This is fine because the meta-language at layer $\metalevel$ is strictly more expressive than
that at layer $\codelevel$, i.e. \mltt. %
Therefore, the universe level $0$ is large enough to encode all intrinsically typed
syntax of \mltt without causing any size issue. %
This intuition is further justified in the logical relations, where the semantics of
contextual types need not to refer to other semantics at layer $\metalevel$, so contextual
types can safely reside on level $0$. %
Similarly, meta-functions with open types as inputs also do not need to be concerned
about the universe levels of the types. %

The additional equivalence rules are just their congruence rules:
\begin{mathpar}
  \inferrule
  {\lpjudge{\typeof i}\Gamma \\ U : \DTyp[\Delta]{i'} l \in \Psi \\ i' \in \{\codelevel, \proglevel\} \\ i' \le i \\ \ltsubeq i \delta{\delta'} \Delta}
  {\lttypeq i{U^\delta}{U^{\delta'}}{l}}

  \inferrule
  {\lpjudge \metalevel\Gamma \\ \lttypeq[\Psi, g : \Ctx] \metalevel T{T'} l \\ \tyequiv[L]l{l'}\Level}
  {\lttypeq \metalevel{\CPI g l T}{\CPI g{l'}{T'}}{l}}

  \inferrule
  {\lpjudge \metalevel\Gamma \\ \lpequiv \proglevel \Delta{\Delta'} \\ \lttypeq[\Psi, U : \DTyp[\Delta] \proglevel l] \metalevel T{T'}{l'} \\ \tyequiv[L]{l_1}{l_3}\Level \\ \tyequiv[L]{l_2}{l_4}\Level}
  {\lttypeq \metalevel{\TPI U[\Delta]{l_1}{l_2}T}{\TPI U[\Delta']{l_3}{l_4}{T'}}{l_2}}

  \inferrule
  {\lpjudge \metalevel\Gamma \\ \lpequiv \proglevel \Delta{\Delta'} \\ \tyequiv[L]l{l'}\Level}
  {\lttypeq \metalevel{\CTyp[\Delta]l}{\CTyp[\Delta']{l'}}{0}}

  \inferrule
  {\lpjudge \metalevel\Gamma \\ \lpequiv \proglevel \Delta{\Delta'} \\ \lttypeq[\Psi][\Delta] \proglevel T{T'} l \\ \tyequiv[L]l{l'}\Level}
  {\lttypeq \metalevel{\CTrm[\Delta]T l}{\CTrm[\Delta']{T'}{l'}}{0}}
\end{mathpar}

Next, we list the extended typing judgments:
\begin{mathpar}
  \inferrule
  {\lpjudge{\typeof i}\Gamma \\ u : \DTrm[\Delta]{i'}T l \in \Psi \\ i' \in \{\varlevel, \codelevel\} \\ i \in
    \{\varlevel, \codelevel, \proglevel, \metalevel\} \\ i' \le i \\ \ltsubst i \delta \Delta}
  {\lttyping i{u^\delta}{T[\delta]}{l}}

  \inferrule
  {\lpjudge \metalevel\Gamma \\\\ \lttyping[\Psi, g : \Ctx]\metalevel t{T}{l} \\ \typing[L]l\Level}
  {\lttyping \metalevel{\CLAM l g t}{\CPI g l T}{l}}

  \inferrule
  {\lttypwf[\Psi, g : \Ctx] \metalevel T l  \\\\ \lttyping \metalevel{t}{\CPI g l T}{l} \\ \lpjudge \proglevel \Delta}
  {\lttyping \metalevel{\CAPP t \Delta}{T[\Delta/g]}{l}}

  \inferrule
  {\lpjudge \metalevel\Gamma \\ \lttyping[\Psi, U : \DTyp[\Delta] \proglevel l]\metalevel{t}{T}{l'} \\ \typing[L]{l}\Level \\ \typing[L]{l'}\Level}
  {\lttyping \metalevel{\TLAM{l}{l'}U t}{\TPI U[\Delta] l{l'} T}{l'}}

  \inferrule
  {\lttypwf[\Psi, U : \DTyp[\Delta] \proglevel l] \metalevel {T'}{l'} \\ \lttyping \metalevel{t}{\TPI U[\Delta] l{l'}{T'}}{l'} \\ \lttypwf[\Psi][\Delta] \proglevel{T} l}
  {\lttyping \metalevel{\TAPP t{T}}{T'[T/U]}{l'}}

  \inferrule
  {\lpjudge \metalevel \Gamma \\ \lttypwf[\Psi][\Delta]\codelevel T l}
  {\lttyping \metalevel{\boxit T}{\CTyp[\Delta] l}{0}}

  \inferrule
  {\lpjudge \metalevel \Gamma \\ \lttyping[\Psi][\Delta]\codelevel t T l}
  {\lttyping \metalevel{\boxit t}{\CTrm[\Delta] T l}{0}}

  \inferrule
  {\typing[L]{l'}\Level \\ \typing[L]l\Level \\ \lpjudge \proglevel \Delta \\
    \lttyping \metalevel{t}{\CTyp[\Delta]l}{1 + l} \\
    \lttypwf[\Psi][\Gamma,x_T : \CTyp[\Delta]l \at{0}]\metalevel{M}{l'} \\
  \lttyping[\Psi, U : \DTyp[\Delta]\codelevel l]\metalevel{t'}{M[\boxit U/x_T]}{l'}}
  {\lttyping \metalevel{\LETBTYP{l'}l \Delta{x_T.M}{U}{t'}t}{M[t/x_T]}{l'}}

  \inferrule
  {\typing[L]{l'}\Level \\ \typing[L]l\Level \\ \lpjudge \proglevel \Delta \\
    \lttypwf[\Psi][\Delta]\proglevel T l \\
    \lttyping \metalevel{t}{\CTrm[\Delta] T l}{l} \\
    \lttypwf[\Psi][\Gamma,x_t : \CTrm[\Delta] T l \at{0}]\metalevel{M}{l'} \\
  \lttyping[\Psi, u : \DTrm[\Delta]\codelevel T l]\metalevel{t'}{M[\boxit u/x_t]}{l'}}
  {\lttyping \metalevel{\LETBTRM{l'}l \Delta T{x_t.M}{u}{t'}t}{M[t/x_t]}{l'}}
\end{mathpar}
For the typing rule of $\CAPP t \Delta$, we require $\Delta$ to be a context at layer
$\proglevel$. %
This is because a context variable represents a local context for an MLTT term. %
A local context for an MLTT term necessarily lives at layer $\proglevel$, so we can only
substitute a context living at layer $\proglevel$ with a context variable. %

Now we shall mentally prepare ourselves to write down the typing rules for the two
recursive principles. %
They are conceptually easy but simply verbose to write down. %
We will only write down the rules for this time for completeness and in later
discussions, we simply omit the premises. %
Our goal is to provide the following conclusions:
\begin{gather*}
  \lttyping \metalevel {\ELIMTYP{l_1}{l_2}M b{l'}\Delta t}{M_\Typ[l'/l,\Delta/g,t/x_T]}{l_1} \\
  \lttyping \metalevel {\ELIMTRM{l_1}{l_2}M b{l'}\Delta T t}{M_\Trm[l'/l,\Delta/g,T/U_T,t/x_t]}{l_2}
\end{gather*}
We group the premises into different parts. %
First we give the premises related to the motives:
\begin{mathpar}
  \typing[L]{l_1}\Level

  \typing[L]{l_2}\Level

  \lttypwf[\Psi, g : \Ctx][\Gamma, x_T : \CTyp[g]\ell \at{0}][L,\ell] \metalevel{M_\Typ}{l_1}

  \lttypwf[\Psi, g : \Ctx, U_T : \DTyp[g]\proglevel\ell][\Gamma, x_t : \CTrm[g]{U_T^\id}\ell \at{0}][L,\ell] \metalevel{M_\Trm}{l_2}
\end{mathpar}
where $\vect M =  (\ell,g,x_T. M_\Typ)~(\ell,g,U_T,x_t. M_\Trm)$. %
In the premises above, we give the well-formedness of two motives for code of types
and terms, respectively. %
Let us call this group $G_M$. %
We move on to considering the branches. %
We first consider the branches for code of types. %
It is relatively easy as there are only four cases:
\begin{itemize}
\item 
  \[
    \lttyping[\Psi, g : \Ctx]\metalevel{t_\Nat}{M_\Typ[0/\ell,g/g,\boxit \Nat/x_T]}{l_1}
  \]
\item We explain this premise more carefully. Consider some code of type
  $\boxit{\PI{l}{l'}x S T}$, then we have the matching premise
  \[
    \lttyping[\Psi'][\Gamma'][L,\ell,\ell']\metalevel{t_\Pi}{M_\Typ[\ell\sqcup \ell'/\ell,g/g,\boxit{\PI{\ell}{\ell'}x{U_S^\id}{U_T^\id}}/x_T]}{l_1}
  \]
  where
  \begin{align*}
    \Psi' :=&~ \Psi \\
    ,&~ g : \Ctx \\
    ,&~U_S : \DTyp[g] \codelevel \ell
       \tag{the global variable for the input type, which captures $S$}\\
    ,&~U_T :\DTyp[g,x:U_S^\id \at{\ell}]\codelevel {\ell'}
       \tag{the global variable for the output type, which captures $T$; note that it
       lives in an extended local context}
  \end{align*}
  and
  \begin{align*}
    \Gamma' :=&~ \Gamma \\
    ,&~ x_S: M_\Typ[\ell/\ell,g/g,\boxit{U_S^\id}/x_T] \at{l_1}
       \tag{the recursive call for $S$ of type $M$ that is properly substituted}\\
    ,&~x_T:M_\Typ[\ell'/\ell,(g,x:U_S^\id \at{\ell})/g,\boxit{U_T^\id}/x_T]\at {l_1}
       \tag{the recursive call for $T$; see how the local context is extended}
  \end{align*}
\item
  Further,
  \[
    \lttyping[\Psi,g:\Ctx][\Gamma][L,\ell]\metalevel{t_\Se}{M_\Typ[1 + \ell/\ell,g/g,\boxit{\Ty \ell}/x_T]}{l_1}
  \]
\item
  \[
    \lttyping[\Psi,g:\Ctx, u_t:\DTrm[g]\codelevel{\Ty \ell}{1 + \ell}][\Gamma'][L,\ell]\metalevel{t_\tEl}{M_\Typ[\ell/\ell,g/g,\boxit{(\Elt \ell{u_t^\id})}/x_T]}{l_1}
  \]
  where
  \[
    \Gamma' := \Gamma, x_t : M_\Trm[1 + \ell/\ell,g/g,\Ty \ell/U_T,\boxit{u_t^\id}/x_t]  \at{l_2}
  \]
\end{itemize}
Let us call this group $G_\Typ$. 

Lastly, let us consider the nine cases for terms.
\begin{itemize}
\item
  \[
    \lttyping[\Psi, g : \Ctx,U_T : \DTyp[g] \proglevel \ell, u_x : \DTrm[g]\varlevel{U_T^\id}{\ell}][\Gamma][L,\ell]\metalevel{t_x}{M_\Trm[\ell/\ell,g/g,U_T^\id/U_T,\boxit u_x/x_t]}{l_2}
  \]
  In this case, the type of the variable is captured by $U_T$. %
  It has to live at layer $\proglevel$ because it is not a sub-structure of the variable,
  i.e. it is obtained externally, from the indexing arguments of the recursive principles. %
\item
  \[
    \lttyping[\Psi, g : \Ctx]\metalevel{t'_\Nat}{M_\Trm[1/\ell,g/g,\Ty 0/U_T,\boxit \Nat/x_t]}{l_2}    
  \]
\item
  \[
    \lttyping[\Psi'][\Gamma'][L,\ell,\ell']\metalevel{t'_\Pi}{M_\Trm[1 +{(\ell\sqcup \ell')}/\ell,g/g,\Ty{\ell\sqcup \ell'}/U_T,\boxit{\PI{\ell}{\ell'}x{u_s^\id}{u_t^\id}}/x_t]}{l_2}
  \]
  where
  \[
    \Psi' := \Psi, g : \Ctx,u_s : \DTrm[g] \codelevel {\Ty \ell}{1 + \ell},u_t :\DTrm[g,x:\Elt{\ell}{u_s^\id}]\codelevel{\Ty{\ell'}} {1 + \ell'}
  \]
  and
  \begin{align*}
    \Gamma' :=&~ \Gamma \\
    ,&~ x_s: M_\Trm[1 + \ell/\ell,g/g,\Ty \ell/U_T,\boxit{u_s^\id}/x_t] \at{l_2} \\
    ,&~ x_t:M_\Trm[1 + \ell'/\ell,(g,x:\Elt{\ell}{u_s^\id})/g,\Ty{\ell'}/U_T,\boxit{u_t^\id}/x_t]\at {l_2}
  \end{align*}
  Notice that this premise for the encoding of $\Pi$ is almost identical to the
  premise in $G_\Typ$ above, with necessary adjustment to return the proper motive
  $M_\Trm$ instead. %
  
\item
  \[
    \lttyping[\Psi,g:\Ctx][\Gamma][L,\ell]\metalevel{t'_\Se}{M_\Trm[{2 + \ell}/\ell,g/g, \Ty{1 + \ell}/U_T,\boxit{\Ty \ell}/x_T]}{l_2}
  \]
\item
  \[
    \lttyping[\Psi, g : \Ctx]\metalevel{t_\ze}{M_\Trm[0/\ell,g/g,\Nat/U_T,\boxit \ze/x_t]}{l_2}    
  \]
\item 
  \[
    \lttyping[\Psi, g : \Ctx,u_t: \DTrm[g]\codelevel\Nat 0][\Gamma']\metalevel{t_\tsucc}{M_\Trm[0/\ell,g/g,\Nat/U_T,\boxit{(\su{u_t^\id})}/x_t]}{l_2}    
  \]
  where $\Gamma' := \Gamma, x_t:M_\Trm[0/\ell,g/g,\Nat/U_T,\boxit{u_t^\id}/x_t] \at{l_2}$.
\item We carefully explain this premise for the code of the elimination of natural
  numbers. %
  Recall that the syntax is $\ELIMN l{x.M}{s}{x,y.s'}t$. %
  We use corresponding global variables to capture the sub-structures. 
  \[
    \lttyping[\Psi'][\Gamma'][L,\ell]\metalevel{t_{\telimn}}{M_\Trm[\ell/\ell,g/g,U_M^{\id_g,u_t^\id/x}/U_T,\ELIMN \ell{x.U_M^{\id_{g,x}}}{u_s^{\id_g}}{x,y. u_{s'}^{\id_{g,x,y}}}{u_t^{\id_g}}/x_t]}{l_2}
  \]
  where
  \begin{align*}
    \Psi' :=&~ \Psi \\
    ,&~g : \Ctx \\
    ,&~U_M : \DTyp[g,x:\Nat\at 0]\codelevel \ell
       \tag{the global variable for the motive; it lives at layer $\codelevel$ as it is a sub-structure}\\
    ,&~u_s : \DTrm[g]\codelevel{U_M^{\id_g,\ze/x}}\ell
       \tag{the code for the base case; its type refers to the code of the motive with
       $x$ for $\ze$}\\
    ,&~u_{s'} : \DTrm[g, x : \Nat\at 0, y : U_M^{\id_{g,x}} \at
       \ell]\codelevel{U_M^{\id_g,\su{x}/x}}\ell
       \tag{the code for the step case; the local context is extended with the
       predecessor and the recursive call} \\
    ,&~u_t : \DTrm[g]\codelevel{\Nat}0
       \tag{the code for the scrutinee}
  \end{align*}
  and
  \begin{align*}
    \Gamma' :=&~\Gamma \\
    ,&~x_M : M_\Typ[\ell/\ell,(g,x:\Nat\at 0)/g,\boxit{U_M^{\id_{g,x}}}/x_T] \at{l_1}
       \tag{since the motive is a sub-structure, a recursive call is available}\\
    ,&~x_s : M_\Trm[\ell/\ell,g/g,U_M^{\id_g,\ze/x}/U_T,\boxit{u_s^{\id_g}}/x_t] \at{l_2}
       \tag{the recursive call for the base case; recall that $U_T$ is the type of
       $x_t$,}\\
    \tag{which in this case is also the type of $u_s$} \\
    ,&~x_{s'} : M_\Trm[\ell/\ell,(g, x : \Nat\at 0, y : U_M^{\id_{g,x}} \at
       \ell)/g,U_M^{\id_g,\su x/x}/U_T,\boxit{u_{s'}^{\id_{g,x,y}}}/x_t] \at{l_2}
       \tag{the recursive call for the step case; similar logic applies but more longer}\\
    ,&~x_t : M_\Trm[\ell/\ell,g/g,\Nat/U_T,\boxit{u_t^\id}/x_t] \at{l_2}
       \tag{the recursive call for the scrutinee}
  \end{align*}
\item
  \[
    \lttyping[\Psi'][\Gamma'][L,\ell,\ell']\metalevel{t_\lambda}{M_\Trm[\ell\sqcup \ell'/\ell,g/g,\PI
      \ell{\ell'}x{U_S^{\id_g}}{U_T^{\id_{g,x}}}/U_T, \boxit{\LAM \ell{\ell'} x{U_S^{\id_g}}{u_t^{\id_{g,x}}}}/x_t]}{l_2}
  \]
  where
  \begin{align*}
    \Psi' :=&~ \Psi \\
    ,&~g : \Ctx \\
    ,&~U_S : \DTyp[g]\codelevel \ell \\
    ,&~U_T : \DTyp[g, x : U_S^{\id_g} \at \ell]\proglevel{\ell'} \\
    ,&~u_t : \DTrm[g, x : U_S^{\id_g} \at \ell]\codelevel{U_T^{\id_{g,x}}}{\ell'}
  \end{align*}
  and
  \begin{align*}
    \Gamma' :=&~\Gamma \\
    ,&~x_S : M_\Typ[\ell/\ell,g/g,\boxit{U_S^{\id_g}}/x_T] \at{l_1} \\
    ,&~x_t : M_\Trm[\ell'/\ell,(g,x : U_S^{\id_g} \at \ell)/g,U_T^{\id_{g,x}}/U_T,\boxit{u_t^{\id_{g,x}}}/x_t] \at{l_2}
  \end{align*}
  Note that here $U_T$ is at layer $\proglevel$. %
  This is because the return type of not a sub-structure in a function abstraction
  $\LAM l{l'} x S t$, and therefore it must be captured externally from the indexing
  arguments of the recursive principle. %
  Since it is not a sub-structure, there also is not a recursive call for it. %
  It is possible to include the return type as a sub-structure, e.g. $\LAM{l}{l'}x S{(t
    : T)}$ but we decided to show this alternative to demonstrate various design spaces.
  
\item Finally,
  \[
    \lttyping[\Psi'][\Gamma'][L,\ell,\ell']\metalevel{t_\tapp}{M_\Trm[\ell'/\ell,g/g,U_T^{\id_g,u_t^{\id_g}/x}/U_T,\boxit{(\APP{u_t^{\id_g}}
        \ell{\ell'} x{U_S^{\id_{g}}}{U_T^{\id_{g,x}}}{u_s^{\id_g}})}/x_t]}{l_2}
  \]
  where
  \begin{align*}
    \Psi' :=&~ \Psi \\
    ,&~g : \Ctx \\
    ,&~U_S : \DTyp[g]\codelevel \ell \\
    ,&~U_T : \DTyp[g, x : U_S^{\id_g} \at \ell]\codelevel{\ell'} \\
    ,&~u_t : \DTrm[g]\codelevel{\PI \ell{\ell'}x{U_S^{\id_{g}}}{U_T^{\id_{g,x}}}}{\ell \sqcup \ell'} \\
    ,&~u_s : \DTrm[g]\codelevel{U_S^{\id_{g}}}{\ell}
  \end{align*}
  and
  \begin{align*}
    \Gamma' :=&~\Gamma \\
    ,&~x_S : M_\Typ[\ell/\ell,g/g,\boxit{U_S^{\id_g}}/x_T] \at{l_1} \\
    ,&~x_T : M_\Typ[\ell/\ell,(g,x : U_S^{\id_g} \at \ell)/g,\boxit{U_T^{\id_{g,x}}}/x_T] \at{l_1} \\
    ,&~x_t : M_\Trm[\ell \sqcup \ell'/\ell,g/g,\PI \ell{\ell'}x{U_S^{\id_{g}}}{U_T^{\id_{g,x}}}/U_T,\boxit{u_t^{\id_{g}}}/x_t] \at{l_2} \\
    ,&~x_s : M_\Trm[\ell/\ell,g/g,U_S^{\id_{g}}/U_T,\boxit{u_s^{\id_{g}}}/x_t] \at{l_2}
  \end{align*}
  This premise shows why we must use a more verbose syntax for application, i.e. $\APP
  t l{l'}xSTs$. %
  In the global context, we must introduce the global variables for the input and
  output types. %
  However, a vanilla function application $t~s$ has no such information at all. %
  Since the current syntax has both input and output types as sub-structures, we can
  also allow their recursive calls. 
\end{itemize}
All premises above conclude the group for terms, which we name $G_\Trm$. %
We collectively use $G_A$ for all three groups above, i.e. $G_A :=
G_M~G_\Typ~G_\Trm$. %
Then we have the typing rule for the recursive principles as follows:
\begin{mathpar}
  \inferrule
  {G_A \\ \typing[L]{l'}\Level \\ \lpjudge \proglevel \Delta \\ \lttyping \metalevel t{\CTyp[\Delta]{l'}}{0}}
  {\lttyping \metalevel {\ELIMTYP{l_1}{l_2}M b{l'}\Delta t}{M_\Typ[l'/\ell,\Delta/g,t/x_T]}{l_1}}

  \inferrule
  {G_A \\ \typing[L]{l'}\Level \\ \lpjudge \proglevel \Delta \\ \lttypwf[\Psi][\Delta]\proglevel T{l'} \\
    \lttyping \metalevel t{\CTrm[\Delta]T{l'}}{0}}
  {\lttyping \metalevel {\ELIMTRM{l_1}{l_2}M b{l'}\Delta T t}{M_\Trm[l'/\ell,\Delta/g,T/U_T,t/x_t]}{l_2}}
\end{mathpar}

\subsection{More Congruence Rules for Typing}

The congruence rules for the additional typing rules are naturally derived from
the typing rules above.
\begin{mathpar}
  \inferrule
  {\lpjudge{\typeof i}\Gamma \\ u : \DTrm[\Delta]{i'}T l \in \Psi \\ i' \in \{\varlevel, \codelevel\} \\ i \in
    \{\varlevel, \codelevel, \proglevel, \metalevel\} \\ i' \le i \\ \ltsubeq i \delta{\delta'} \Delta}
  {\lttyequiv i{u^\delta}{u^{\delta'}}{T[\delta]}{l}}

  \inferrule
  {\lpjudge \metalevel\Gamma \\ \lttyequiv[\Psi, g : \Ctx]\metalevel t{t'}{T}{l} \\ \tyequiv[L]l{l'}\Level}
  {\lttyequiv \metalevel{\CLAM l g t}{\CLAM{l'} g{t'}}{\CPI g l T}{l}}

  \inferrule
  {\lttyequiv \metalevel{t}{t'}{\CPI g l T}{l} \\ \lpequiv \proglevel \Delta{\Delta'}}
  {\lttyequiv \metalevel{\CAPP t \Delta}{\CAPP{t'}{\Delta'}}{T[\Delta/g]}{l}}

  \inferrule
  {\lpjudge \metalevel\Gamma \\ \lttyequiv[\Psi, U : \DTyp[\Delta] \proglevel{l_1}]\metalevel{t}{t'}{T}{l_2} \\ \tyequiv[L]{l_1}{l_3}\Level \\ \tyequiv[L]{l_2}{l_4}\Level}
  {\lttyequiv \metalevel{\TLAM{l_1}{l_2}U t}{\TLAM{l_3}{l_4}U{t'}}{\TPI U[\Delta]{l_1}{l_2}
      T}{l_2}}

  \inferrule
  {\lttyequiv \metalevel{t}{t'}{\TPI U[\Delta] l{l'}{T''}}{l'} \\ \lttypeq[\Psi][\Delta] \proglevel{T}{T'}l}
  {\lttyequiv \metalevel{\TAPP t{T}}{\TAPP{t'}{T'}}{T''[T/U]}{l'}}
\end{mathpar}

The following rules are related to meta-programming and intensional analysis. 
\begin{mathpar}
  \inferrule
  {\lpjudge \metalevel \Gamma \\ \lttypeq[\Psi][\Delta]\codelevel T{T'}l}
  {\lttyequiv \metalevel{\boxit T}{\boxit{T'}}{\CTyp[\Delta] l}{0}}

  \inferrule
  {\lpjudge \metalevel \Gamma \\ \lttyequiv[\Psi][\Delta]\codelevel t{t'} T l}
  {\lttyequiv \metalevel{\boxit t}{\boxit{t'}}{\CTrm[\Delta] T l}{0}}

  \inferrule
  {\lpjudge \metalevel \Gamma \\ \tyequiv[L]{l_1}{l_3}\Level \\ \tyequiv[L]{l_2}{l_4}\Level \\ \lpequiv \proglevel \Delta{\Delta'} \\
    \lttyequiv \metalevel{t}{t'}{\CTyp[\Delta]{l_2}}{0} \\
    \lttypeq[\Psi][\Gamma,x_T : \CTyp[\Delta]{l_2} \at{0}]\metalevel{M}{M'}{l_1} \\
  \lttyequiv[\Psi, U : \DTyp[\Delta]\codelevel {l_2}]\metalevel{t_1}{t_2}{M[\boxit{U^\id}/x_T]}{l_1}}
  {\lttyequiv \metalevel{\LETBTYP{l_1}{l_2} \Delta{x_T.M}{U}{t_1}t}{\LETBTYP{l_3}{l_4}{\Delta'}{x_T.M'}{U}{t_2}{t'}}{M[t/x_T]}{l_1}}

  \inferrule
  {\lpjudge \metalevel \Gamma \\ \tyequiv[L]{l_1}{l_3}\Level \\ \tyequiv[L]{l_2}{l_4}\Level \\
    \lpequiv \proglevel \Delta{\Delta'} \\
    \lttypeq \proglevel T{T'}{l_2} \\
    \lttyequiv \metalevel{t}{t'}{\CTrm[\Delta]T{l_2}}{0} \\
    \lttypeq[\Psi][\Gamma,x_T : \CTrm[\Delta]T{l_2} \at{0}]\metalevel{M}{M'}{l_1} \\
    \lttyequiv[\Psi, u : \DTrm[\Delta]\codelevel T{l_2}]\metalevel{t_1}{t_2}{M[\boxit{u^\id}/x_t]}{l_1}}
  {\lttyequiv \metalevel{\LETBTRM{l_1}{l_2} \Delta{T}{x_t.M}{u}{t_1}t}{\LETBTRM{l_3}{l_4}{\Delta'}{T'}{x_T.M'}{u}{t_2}{t'}}{M[t/x_t]}{l_1}}
\end{mathpar}
We omit the congruence rules for the recursive principles for code as they are
conceptually simple but too long. %
We simply let equivalence to propagate inwards to all the sub-terms of the recursive
principles. 

\subsection{Computation Rules}

Finally, we list all the computation rules. %
In the rules below, we let $\compt i$. %
We first list the $\beta$ rules for
natural numbers:
\begin{mathpar}
  \inferrule
  {\typing[L]l\Level \\ \lttypwf[\Psi][\Gamma, x : \Nat \at 0] i M l \\
    \lttyping i s {M[\ze/x]}l \\
    \lttyping[\Psi][\Gamma, x : \Nat \at 0, y : M \at l] i {s'}{M[\su x/x]}l}
  {\lttyequiv i{s}{\ELIMN l{x.M}s{x,y. s'}\ze}{M[\ze/x]}{l}}

  \inferrule
  {\typing[L]l\Level \\ \lttypwf[\Psi][\Gamma, x : \Nat \at 0] i M l \\
    \lttyping i s {M[\ze/x]}l \\
    \lttyping[\Psi][\Gamma, x : \Nat \at 0, y : M \at l] i {s'}{M[\su x/x]}l \\
    \lttyping i t \Nat 0}
  {\lttyequiv i{s'[t/x,\ELIMN l{x.M}s{x,y. s'}t/y]}{\ELIMN l{x.M}s{x,y. s'}{(\su t)}}{M[\su t/x]}{l}}
\end{mathpar}
Then we have the $\beta$ and $\eta$ rules for dependent functions:
\begin{mathpar}
  \inferrule
  {\typing[L]l\Level \\ \typing[L]{l'}\Level \\ \lttypwf i S{l}
    \\
    \lttypwf[\Psi][\Gamma, x : S \at{l}] i{T}{l'} \\
    \lttyping[\Psi][\Gamma, x : S \at{l}] i t{T}{l'} \\ \lttyping i s{S}{l}}
  {\lttyequiv i{t[s/x]}{\APP {\LAM {l}{l'} x S t} {l}{l'} x{S} T s}{T[s/x]}{l'}}

  \inferrule
  {\typing[L]l\Level \\ \typing[L]{l'}\Level \\ \lttypwf i S{l} \\
    \lttypwf[\Psi][\Gamma, x : S \at{l}] i{T}{l'} \\
    \lttyping i t{\PI {l}{l'}x S T}{l \sqcup l'}}
  {\lttyequiv i{\LAM {l}{l'} x{S} {\APP t {l}{l'}x S T x}}{t}{\PI {l}{l'} x S T}{l \sqcup l'}}
\end{mathpar}
In the $\eta$ rule, on the right hand side, all $t$, $S$ and $T$ should be properly
locally weakened.

Finally we have $\beta$ and $\eta$ rules for universe-polymorphic functions:
\begin{mathpar}
  \inferrule
  {\lpjudge \metalevel \Gamma \\ \lttyping[\Psi][\Gamma][L, \vect \ell]\metalevel{t}{T}{l} \\ \typing[L,\vect\ell]l\Level \\
    |\vect\ell| = |\vect l| > 0 \\ \forall l' \in \vect l ~.~ \typing[L]{l'}\Level}
  {\lttyequiv \metalevel{t[\vect l/\vect \ell]}{(\ULAM l \ell t)~\$~\vect l}{T[\vect l/\vect \ell]}{l[\vect l/\vect \ell]}}

  \inferrule
  {\lttyping \metalevel{t}{\UPI \ell l T}{\omega}}
  {\lttyequiv \metalevel{\ULAM l \ell{(t~\$~\vect\ell)}}{t}{\UPI \ell l T}{\omega}}
\end{mathpar}
Similarly, in the $\eta$ rule, the universe variables appearing in $t$ must also be
properly weakened. 
This concludes all the rules for the MLTT portion of \delamlang. 

Then we move on to considering the computation rules for the extended types. %
Let us finish considering all meta-function types.
\begin{mathpar}
  \inferrule
  {\lpjudge \metalevel \Gamma \\ \lttyping[\Psi, g : \Ctx]\metalevel t{T}{l} \\ \typing[L]l\Level \\ \lpjudge \proglevel \Delta}
  {\lttyequiv \metalevel{t[\Delta/g]}{\CAPP{(\CLAM l g t)}\Delta}{T[\Delta/g]}{l}}

  \inferrule
  {\lttyping \metalevel{t}{\CPI g l T}{l}}
  {\lttyequiv \metalevel{\CLAM l g {(\CAPP t g)}}{t}{\CPI g l T}{l}}

  \inferrule
  {\lpjudge \metalevel \Gamma \\ \lttyping[\Psi, U : \DTyp[\Delta] \proglevel l]\metalevel{t}{T'}{l'} \\ \typing[L]{l}\Level \\ \typing[L]{l'}\Level \\ \lttypwf[\Psi][\Delta] \proglevel{T} l}
  {\lttyequiv \metalevel{t[T/U]}{\TAPP{(\TLAM{l}{l'}U t)}{T}}{T'[T/U]}{l'}}

  \inferrule
  {\lttyping \metalevel{t}{\TPI U[\Delta] l{l'}{T'}}{l'}}
  {\lttyequiv \metalevel{\TLAM{l}{l'}U{(\TAPP t{U^\id})}}{t}{\TPI U[\Delta] l{l'}{T'}}{l'}}
\end{mathpar}
Now we consider the contextual types. %
They only have $\beta$ rules. %
Let us consider $\tletbox$ first.
\begin{mathpar}
  \inferrule
  {\lpjudge \metalevel \Gamma \\ \typing[L]{l'}\Level \\ \typing[L]l\Level \\ \lpjudge \proglevel \Delta \\
    \lttypwf[\Psi][\Delta]\codelevel T l \\
    \lttypwf[\Psi][\Gamma,x_T : \CTyp[\Delta]l \at{0}]\metalevel{M}{l'} \\
  \lttyping[\Psi, U : \DTyp[\Delta]\codelevel l]\metalevel{t'}{M[\boxit U^\id/x_T]}{l'}}
  {\lttyequiv \metalevel{t'[T/U]}{\LETBTYP{l'}l \Delta{x_T.M}{U}{t'}{(\boxit T)}}{M[\boxit T/x_T]}{l'}}

  \inferrule
  {\lpjudge \metalevel \Gamma \\ \typing[L]{l'}\Level \\ \typing[L]l\Level \\ \lpjudge \proglevel \Delta \\
    \lttypwf[\Psi][\Delta]\proglevel T l \\
    \lttyping[\Psi][\Delta]\codelevel t T l \\
    \lttypwf[\Psi][\Gamma,x_t : \CTrm[\Delta] T l \at{0}]\metalevel{M}{l'} \\
  \lttyping[\Psi, u : \DTrm[\Delta]\codelevel T l]\metalevel{t'}{M[\boxit u^\id/x_t]}{l'}}
  {\lttyequiv \metalevel{t'[t/u]}{\LETBTRM{l'}l \Delta T{x_t.M}{u}{t'}{(\boxit t)}}{M[\boxit t/x_t]}{l'}}
\end{mathpar}
We can also give the $\beta$ rules for the recursive principles for code. %
There are too many to list them all, and moreover they follow the same pattern, so we
just list a selected few of them. %
We begin with something easy:
\begin{mathpar}
  \inferrule
  {G_A \\ \lpjudge \metalevel \Gamma \\ \lpjudge \proglevel \Delta}
  {\lttyequiv \metalevel {t_\Nat[\Delta/g]}{\ELIMTYP{l_1}{l_2}M b{0}\Delta{(\boxit
        \Nat)}}{M[0/\ell,\Delta/g,\boxit \Nat/x_T]}{l_1}}
\end{mathpar}
In this case, we provide $\Nat$ to the recursive principle for code of types. %
It hits the base case described by $t_\Nat$, and thus the whole program is reduced to
$t_\Nat$ with $g$ for $\Delta$. %
Note that the universe level for $\Nat$ must be $0$ as specified by the typing
judgment at layer $\codelevel$. %
The recursive principle for code of terms behaves very similarly when encountering the
code of $\Nat$. %
Instead, it picks the right branch $t'_\Nat$ and returns the right motive instead:
\begin{mathpar}
  \inferrule
  {G_A \\ \lpjudge \metalevel \Gamma \\ \lpjudge \proglevel \Delta}
  {\lttyequiv \metalevel {t'_\Nat[\Delta/g]}{\ELIMTRM{l_1}{l_2}M b{1}\Delta{\Ty 0}{(\boxit
        \Nat)}}{M_\Trm[1/\ell,\Delta/g,\Ty 0/U_T,\boxit \Nat/x_t]}{l_2}}
\end{mathpar}
In order to have the code to be $\boxit\Nat$ as a term, this code must have type $\Ty
0$, which lives at universe level $1$. %
Hence the indices are forced by the typing rules at layer $\codelevel$. 

Then we specify the variable case:
\begin{mathpar}
  \inferrule
  {G_A \\ \lpjudge \metalevel \Gamma \\ \typing[L]{l'}\Level \\ \lpjudge \proglevel \Delta \\ \lttypwf[\Psi][\Delta]\proglevel T{l'}
    \\
    x : T \at{l'} \in \Delta}
  {\lttyequiv \metalevel {t_x[l'/\ell,\Delta/g,T/U_T,x/u_x]}{\ELIMTRM{l_1}{l_2}M b{l'}\Delta{T}{(\boxit
        x)}}{M_\Trm[l'/\ell,\Delta/g,T/U_T,\boxit x/x_t]}{l_2}}
\end{mathpar}
The subtlety here is that $u_x$ can only receive a variable as it is typed at layer
$\varlevel$, but it is fine as $x$ is precisely just a variable. 

Then let us consider a more complex case of $\Pi$ types.
\begin{mathpar}
  \inferrule
  {G_A \\ \lpjudge \metalevel \Gamma \\ \typing[L]{l}\Level \\ \typing[L]{l'}\Level \\ \lpjudge \proglevel \Delta \\
    \lttypwf[\Psi][\Delta] \codelevel S l  \\ \lttypwf[\Psi][\Delta, x : S \at l]\codelevel{T}{l'} \\
    t = \boxit{\PI{l}{l'}x S T} \\
    s_S = \ELIMTYP{l_1}{l_2}M b{l}\Delta{(\boxit{S})} \\
    s_T = \ELIMTYP{l_1}{l_2}M b{l'}{(\Delta, x : S \at l)}{(\boxit{T})}}
  {\lttyequiv \metalevel {t_\Pi[l/\ell,l'/\ell',\Delta/g,S/U_S,T/U_T,s_S/x_S,
      s_T/x_T]}{\ELIMTYP{l_1}{l_2}M b{(l \sqcup
        l')}\Delta{t}}{M_\Trm[l\sqcup l'/\ell,\Delta/g,t/x_T]}{l_1}}
\end{mathpar}
Notice how $s_S$ and $s_T$ recurse down the sub-structures, i.e. $S$ and $T$ with the
proper universe levels and local contexts. %
We end our discussion by given the $\beta$ rules for code of function abstractions and
applications, as they appear to be rather complex, but their essence is fundamentally simple.
\begin{mathpar}
  \inferrule
  {G_A \\ \lpjudge \metalevel \Gamma \\ \lpjudge \proglevel \Delta \\ \typing[L]l\Level  \\
    \typing[L]{l'}\Level \\ \lttypwf[\Psi][\Delta] \codelevel S l \\
    \lttyping[\Psi][\Delta, x : S \at l] \codelevel t{T}{l'} \\
    l_\Pi = l \sqcup l' \\
    T_\Pi = \PI{l}{l'}x{S}{T} \\
    t' = \boxit{\LAM{l}{l'}x{S}{t}} \\
    s_S = \ELIMTYP{l_1}{l_2}M b{l}\Delta{(\boxit{S})} \\
    s_t = \ELIMTRM{l_1}{l_2}M b{l'}{(\Delta, x : S \at l)}{T}{(\boxit{t})} \\
    \delta = s_S/x_S,s_t/x_t}
  {\lttyequiv \metalevel {t_\lambda[l/\ell,l'/\ell',\Delta/g,S/U_S,T/U_T,t/u_t,\delta]}{\ELIMTRM{l_1}{l_2}M b{l_\Pi}\Delta{T_\Pi}{t'}}{M_\Trm[l_\Pi/\ell,\Delta/g,T_\Pi/U_T,t'/x_t]}{l_2}}
\end{mathpar}
Similarly, the recursive principle for code of terms picks the right branch
($t_\lambda$) with variables properly substituted. %
Since $S$ is also a sub-structure, the recursive call $s_S$ invokes the recursive
principle for code of types instead, hence making the recursive principles mutually
defined.

Last, we give the case for function applications.
\begin{mathpar}
  \inferrule
  {G_A \\ \lpjudge \metalevel \Gamma \\ \lpjudge \proglevel \Delta \\ \typing[L]l\Level  \\ \typing[L]{l'}\Level
    \\ \lttypwf[\Psi][\Delta]\codelevel S l \\ \lttypwf[\Psi][\Delta, x : S \at l] \codelevel{T}{l'} \\
    \lttyping[\Psi][\Delta] \codelevel t{\PI l{l'} x S T}{l \sqcup l'} \\
    \lttyping[\Psi][\Delta] \codelevel s S l \\
    T_\tapp = T[s/x] \\
    t' = \boxit{(\APP t l{l'} x S T s)} \\
    s_S = \ELIMTYP{l_1}{l_2}M b{l}\Delta{(\boxit{S})} \\
    s_T = \ELIMTYP{l_1}{l_2}M b{l'}{(\Delta, x : S \at l)}{(\boxit{T})} \\
    s_t = \ELIMTRM{l_1}{l_2}M b{(l \sqcup l')}{\Delta}{(\PI{l}{l'}xST)}{(\boxit{t})} \\
    s_s = \ELIMTRM{l_1}{l_2}M b{l}{\Delta}{S}{(\boxit{s})} \\
    \sigma = \Delta/g,S/U_S,T/U_T,t/u_t,s/u_s \\
    \delta = s_S/x_S,s_T/x_T,s_t/x_t,s_s/x_s}
  {\lttyequiv \metalevel {t_\tapp[l/\ell,l'/\ell',\sigma,\delta]}{\ELIMTRM{l_1}{l_2}M b{l'}\Delta{T_\tapp}{t'}}{M_\Trm[l'/\ell,\Delta/g,T_\tapp/U_T,t'/x_t]}{l_2}}
\end{mathpar}
Similar to above, we can do recursion on all sub-structures, including $S$ and $T$,
which are handled by the recursive principles for code of types. %
It is not only convenient to put $S$ and $T$ in the syntax of a function
application, but also necessary. %
If we look at $s_t$ and $s_s$, the recursive calls on the function and the argument,
we see that we must supply their types, i.e. $\PI{l}{l'}xST$ and $S$, respectively. %
This information, unfortunately, cannot be recovered, if we employed the more common
syntax of $t~s$. %
In practice, the $\Pi$ type can be filled in by a type inference algorithm when we do
not care, so it does not truly make the type theory more difficult, but rather enables
the recursion on code of function applications.

At this point, we conclude all rules for \delamlang. %
Next, we shall carefully define all syntactic operations and examine the syntactic
properties of \delamlang. %
Then we work out the semantics by following \Cref{sec:cv} and
\citet{abel_decidability_2017}, from which we conclude the convertibility problem of
\delamlang is decidable.

\subsection{A Note on Layer $\varlevel$ Rules}

To summarize, only the following rules can be indexed by layer $\varlevel$:
\begin{itemize}
\item the typing rule for local variables and its congruence;
\item the local substitution rules and their equivalence rules;
\item the typing rule for global variables and its congruence;
\item all conversion rules for terms and their equivalence.
\end{itemize}
In particular, we are not even obliged to include symmetry and transitivity, because
they can be derived from existing rules.

\section{Syntactic Operations and Properties of \delamlang}\labeledit{sec:dt:syn}

In the previous section, we have introduced all judgments of \delamlang, but we have
left out some details. %
For one, we have not defined the substitution operations yet, though they are very
intuitive. %
For the sake of completeness, we will give their definitions. %
Then we examine the syntactic properties of \delamlang before entering the semantic
zone. %

\subsection{Substitution Operations}

In \Cref{sec:dt:ulevel}, we have given the definition of substitutions for universe
levels and how to apply one to a universe level. %
Applying a substitution for universe levels to types and terms simply propagate the
substitution downwards.
\begin{longtable}{R@{$[\phi]~:=~$}L}
  \Nat & \Nat \\
  \PI{l}{l'}x S T & \PI{l[\phi]}{l'[\phi]}x{S[\phi]}{(T[\phi])} \\
  \Ty l & \Ty{l[\phi]} \\
  \UPI \ell l T &
  \UPI\ell{l[\phi,\vect\ell/\vect\ell]}{(T[\phi,\vect\ell/\vect\ell])} \\
  \Elt l t & \Elt{l[\phi]}{(t[\phi])} \\
  U^\delta & U^{\delta[\phi]} \\
  \CPI g l T & \CPI g{l[\phi]}{(T[\phi])} \\
  \TPI U l{l'}T & \TPI U[\Gamma[\phi]]{l[\phi]}{l'[\phi]}{(T[\phi])} \\
  \CTyp l & \CTyp[\Gamma[\phi]]{l[\phi]} \\
  \CTrm T l & \CTrm[\Gamma[\phi]]{T[\phi]}{l[\phi]} \\[5pt]
  \cdot & \cdot \\
  g & g \\
  \Gamma, x : T \at l & \Gamma[\phi], x : T[\phi] \at{l[\phi]} \\[5pt]
  \cdot^k_{g?} & \cdot^k_{g?} \\
  \wk^k_g & \wk^k_g \\
  \delta, t/x & \delta[\phi], t[\phi]/x \\[5pt]
  x & x \\
  \Nat & \Nat \\
  \PI{l}{l'}x s t & \PI{l[\phi]}{l'[\phi]}x{s[\phi]}{(t[\phi])} \\
  \Ty l & \Ty{l[\phi]} \\
  \ze & \ze \\
  \su t & \su{(t[\phi])} \\
  \ELIMN l{x.M}{s}{x,y.s'}{t} &
  \ELIMN{l[\phi]}{x.M[\phi]}{(s[\phi])}{x,y.s'[\phi]}{(t[\phi])}
  \\
  \LAM{l}{l'}x{S}t & \LAM{l[\phi]}{l'[\phi]}x{S[\phi]}{(t[\phi])} \\
  \APP t{l}{l'}x{S}T s & \APP{t[\phi]}{l[\phi]}{l'[\phi]}x{S[\phi]}{T[\phi]}{(s[\phi])}
  \\
  \ULAM l \ell t & \ULAM{l[\phi,\vect\ell/\vect\ell]}\ell{t[\phi,\vect\ell/\vect\ell]}
  \\
  \UAPP t l & (t[\phi])~\$~(\vect l[\phi]) \\
  u^\delta & u^{\delta[\phi]} \\
  \CLAM l g t & \CLAM{l[\phi]}g{(t[\phi])} \\
  \CAPP t \Gamma & \CAPP{t[\phi]}{(\Gamma[\phi])} \\
  \TLAM l{l'} U t & \TLAM{l[\phi]}{l'[\phi]} U{(t[\phi])} \\
  \TAPP t T & \TAPP{t[\phi]}{(T[\phi])} \\
  \boxit T & \boxit{(T[\phi])} \\
  \boxit t & \boxit{(t[\phi])} \\
  \LETBTYP{l'}l\Gamma{x_T. M}{U}{t'}t & \LETBTYP{l'[\phi]}{(l[\phi])}{(\Gamma[\phi])}{x_T. M[\phi]}{U}{t'[\phi]}{(t[\phi])} \\
  \LETBTRM{l'}l\Gamma T{x_t. M}{u}{t'}t &
  \LETBTRM{l'[\phi]}{(l[\phi])}{(\Gamma[\phi])}{(T[\phi])}{x_t. M[\phi]}{u}{t'[\phi]}{(t[\phi])}
  \\
  \ELIMTYP{l_1}{l_2}Mbl\Gamma t & \ELIMTYPn{l_1[\phi]}{l_2[\phi]}{(\vect M[\phi])}{(\vect b[\phi])}{(l[\phi])}{(\Gamma[\phi])}{(t[\phi])} \\
  \ELIMTRM{l_1}{l_2}Mbl\Gamma T t & \ELIMTRMn{l_1[\phi]}{l_2[\phi]}{(\vect
    M[\phi])}{(\vect b[\phi])}{(l[\phi])}{(\Gamma[\phi])}{(T[\phi])}{(t[\phi])}
  \\[5pt]
  \vect M & (\ell,g,x_T. M_\Typ[\phi,\ell/\ell])~(\ell,g,U_T,x_t. M_\Trm[\phi,\ell/\ell]) \\[5pt]
  (g. t_\Nat) & (g. t_\Nat[\phi]) \\
  (\ell,\ell',g, U_S,U_T,x_S,x_T. t_\Pi) & (\ell,\ell',g, U_S,U_T,x_S,x_T. t_\Pi[\phi,\ell/\ell,\ell'/\ell'])  \\
  (\ell,g. t_{\Se}) & (\ell,g. t_{\Se}[\phi,\ell/\ell]) \\
  (\ell,g,u_t,x_t. t_\tEl) & (\ell,g,u_t,x_t. t_\tEl[\phi,\ell/\ell]) \\[5pt]
  (\ell,g,U_T,u_x. t_x) & (\ell,g,U_T,u_x. t_x[\phi,\ell/\ell]) \\
  (g.t'_\Nat) & (g.t'_\Nat[\phi]) \\
  (\ell,\ell',g,u_s,u_t,x_s,x_t.t'_\Pi) & (\ell,\ell',g,u_s,u_t,x_s,x_t.t'_\Pi[\phi,\ell/\ell,\ell'/\ell']) \\
  (\ell,g.t'_\Se) & (\ell,g.t'_\Se[\phi,\ell/\ell]) \\
  (g.t_\ze) & (g.t_\ze[\phi]) \\
  (g,u_t,x_t.t_{\tsucc}) & (g,u_t,x_t.t_{\tsucc}[\phi]) \\
  (\ell,g,U_M,u_s,u_{s'},u_t,x_M,x_s,x_{s'},x_t.t_{\telimn}) & (\ell,g,U_M,u_s,u_{s'},u_t,x_M,x_s,x_{s'},x_t.t_{\telimn}[\phi,\ell/\ell]) \\
  (\ell,\ell',g,U_S,U_T,u_t,x_S,x_t.t_\lambda) & (\ell,\ell',g,U_S,U_T,u_t,x_S,x_t.t_\lambda[\phi,\ell/\ell,\ell'/\ell']) \\
  (\ell,\ell',g,U_S,U_T,u_t,u_s,x_S,x_T,x_t,x_s.t_\tapp) &
  (\ell,\ell',g,U_S,U_T,u_t,u_s,x_S,x_T,x_t,x_s.t_\tapp[\phi,\ell/\ell,\ell'/\ell']) \\
\end{longtable}
The composition operation and the identity substitution are defined intuitively as:
\begin{longtable}{R@{$~\circ~\phi~:=~$}L}
  \cdot & \cdot \\
  (\phi',l/\ell) & (\phi' \circ \phi), l[\phi]/\ell \\
\end{longtable}
\begin{longtable}{R@{$~:=~$}L}
  \id_\cdot & \cdot \\
  \id_{L,\ell} & \id_L,\ell/\ell \\
\end{longtable}
The presentation of the identity substitution is simpler as we do not consider
weakenings for universe contexts.  %
We sometimes omit the subscript when it can be inferred from the textual context. %
We will also need to apply a universe substitution to global context, which does not
need to be mutually defined:
\begin{longtable}{R@{$[\phi]~:=~$}L}
  \cdot & \cdot \\
  (\Psi, g : \Ctx) & \Psi[\phi], g : \Ctx \\
  (\Psi, U : \DTyp i l) & \Psi[\phi], U : \DTyp[\Gamma[\phi]]{i}{l[\phi]} \\
  (\Psi, u : \DTrm i T l) & \Psi[\phi], u : \DTrm[\Gamma[\phi]]{i}{T[\phi]}{l[\phi]}
\end{longtable}

Then we give the the application of a local substitution. %
Following \Cref{sec:cv}, we need two auxiliary definitions to query a local
substitution in order to define its composition. %
We repeat their definitions as follows:
\begin{align*}
  \widehat{\cdot_{g?}^k} &:= k \\
  \widehat{\wk_{g}^k} &:= k \\
  \widehat{\delta, t/x} &:= \widehat{\delta} \\
  \widecheck{\cdot_{g?}^k} &:= g? \\
  \widecheck{\wk_g^k} &:= g \\
  \widecheck{\delta, t/x} &:= \widecheck{\delta}
\end{align*}
Then we give the application of local substitutions:
\begin{longtable}{R@{$[\delta]~:=~$}L}
  \Nat & \Nat \\
  \PI{l}{l'}x S T & \PI{l}{l'}x{S[\delta]}{(T[\delta,x/x])} \\
  \Ty l & \Ty{l} \\
  \UPI \ell l T &
  \UPI\ell{l}{(T[\delta])} \\
  \Elt l t & \Elt{l}{(t[\delta])} \\
  U^{\delta'} & U^{\delta' \circ \delta} \\
  \CPI g l T & \CPI g{l}{(T[\delta])} \\
  \TPI U l{l'}T & \TPI U[\Gamma]{l}{l'}{(T[\delta])} \\
  \CTyp l & \CTyp{l} \\
  \CTrm T l & \CTrm{T}{l} \\[5pt]
  x & \delta(x) \hfill \text{(lookup of $x$ in $\delta$)} \\
  \Nat & \Nat \\
  \PI{l}{l'}x s t & \PI{l}{l'}x{s[\delta]}{(t[\delta,x/x])} \\
  \Ty l & \Ty{l} \\
  \ze & \ze \\
  \su t & \su{(t[\delta])} \\
  \ELIMN l{x.M}{s}{x,y.s'}{t} &
  \ELIMN{l}{x.M[\delta,x/x]}{(s[\delta])}{x,y.s'[\delta,x/x,y/y]}{(t[\delta])}
  \\
  \LAM{l}{l'}x{S}t & \LAM{l}{l'}x{S[\delta]}{(t[\delta,x/x])} \\
  \APP t{l}{l'}x{S}T s & \APP{t[\delta]}{l}{l'}x{S[\delta]}{T[\delta,x/x]}{(s[\delta])}
  \\
  \ULAM l \ell t & \ULAM{l}\ell{t[\delta]}
  \\
  \UAPP t l & \UAPP{t[\delta]}{l} \\
  u^{\delta'} & u^{\delta' \circ \delta} \\
  \CLAM l g t & \CLAM{l}g{(t[\delta])} \\
  \CAPP t \Gamma & \CAPP{t[\delta]}{\Gamma} \\
  \TLAM l{l'} U t & \TLAM{l}{l'} U{(t[\delta])} \\
  \TAPP t T & \TAPP{t[\delta]}{T} \\
  \boxit T & \boxit{T} \\
  \boxit t & \boxit{t} \\
  \LETBTYP{l'}l\Gamma{x_T. M}{U}{t'}t & \LETBTYP{l'}{l}{\Gamma}{x_T. M[\delta,x_T/x_T]}{U}{t'[\delta]}{(t[\delta])} \\
  \LETBTRM{l'}l\Gamma T{x_t. M}{u}{t'}t &
  \LETBTRM{l'}{l}{\Gamma}{T}{x_t. M[\delta,x_t/x_t]}{u}{t'[\delta]}{(t[\delta])}
  \\
  \ELIMTYP{l_1}{l_2}Mbl\Gamma t & \ELIMTYPn{l_1}{l_2}{(\vect M[\delta])}{(\vect b[\delta])}{l}{\Gamma}{(t[\delta])} \\
  \ELIMTRM{l_1}{l_2}Mbl\Gamma T t & \ELIMTRMn{l_1}{l_2}{(\vect
    M[\delta])}{(\vect b[\delta])}{l}{\Gamma}{T}{(t[\delta])}
  \\[5pt]
  \vect M & (\ell,g,x_T. M_\Typ[\delta,x_T/x_T])~(\ell,g,U_T,x_t. M_\Trm[\delta,x_t/x_t]) \\[5pt]
  (g. t_\Nat) & (g. t_\Nat[\delta]) \\
  (\ell,\ell',g, U_S,U_T,x_S,x_T. t_\Pi) & (\ell,\ell',g, U_S,U_T,x_S,x_T. t_\Pi[\delta,x_S/x_S,x_T/x_T])  \\
  (\ell,g. t_{\Se}) & (\ell,g. t_{\Se}[\delta]) \\
  (\ell,g,u_t,x_t. t_\tEl) & (\ell,g,u_t,x_t. t_\tEl[\delta,x_t/x_t]) \\[5pt]
  (\ell,g,U_T,u_x. t_x) & (\ell,g,U_T,u_x. t_x[\delta]) \\
  (g.t'_\Nat) & (g.t'_\Nat[\delta]) \\
  (\ell,\ell',g,u_s,u_t,x_s,x_t.t'_\Pi) & (\ell,\ell',g,u_s,u_t,x_s,x_t.t'_\Pi[\delta,x_s/x_s,x_t/x_t]) \\
  (\ell,g.t'_\Se) & (\ell,g.t'_\Se[\delta]) \\
  (g.t_\ze) & (g.t_\ze[\delta]) \\
  (g,u_t,x_t.t_{\tsucc}) & (g,u_t,x_t.t_{\tsucc}[\delta,x_t/x_t]) \\
  (\ell,g,U_M,u_s,u_{s'},u_t,x_M,x_s,x_{s'},x_t.t_{\telimn}) & \\
  \multicolumn{2}{R}{(\ell,g,U_M,u_s,u_{s'},u_t,x_M,x_s,x_{s'},x_t.t_{\telimn}[\delta,x_M/x_M,x_s/x_s,x_{s'}/x_{s'},x_t/x_t])} \\
  (\ell,\ell',g,U_S,U_T,u_t,x_S,x_t.t_\lambda) & (\ell,\ell',g,U_S,U_T,u_t,x_S,x_t.t_\lambda[\delta,x_S/x_S,x_t/x_t]) \\
  (\ell,\ell',g,U_S,U_T,u_t,u_s,x_S,x_T,x_t,x_s.t_\tapp) & \\
  \multicolumn{2}{R}{(\ell,\ell',g,U_S,U_T,u_t,u_s,x_S,x_T,x_t,x_s.t_\tapp[\delta,x_S/x_S,x_T/x_T,x_{t}/x_{t},x_s/x_s])} \\
\end{longtable}
where composition is defined in the same way as \Cref{sec:cv}:
\begin{align*}
  \wk_g^k \circ \delta &:= \wk_g^{\widehat\delta} \\
  \cdot^k \circ \delta &:= \cdot_{\widecheck{\delta}}^{\widehat\delta} \\
  \cdot_{g}^k \circ \delta &:= \cdot_{g}^{\widehat\delta} \\
  (\delta', t/x) \circ \delta &:= (\delta' \circ \delta), t[\delta]/x
\end{align*}
The identity local substitution is characterized as a generalization of local
weakening $\wk$.
\begin{align*}
  \wk^k_{\cdot} &:=  \cdot^k \\
  \wk^k_{g} &:=  \wk_g^k \\
  \wk^k_{\Gamma, x : T \at l} &:= \wk^{1 + k}_{\Gamma}, x/x
\end{align*}
Identity is just $\id_\Gamma := \wk^0_\Gamma$. 

Then we give the global substitutions.
\begin{alignat*}{2}
  \sigma &:=&& \cdot \sep \sigma, \Gamma/g \sep \sigma, T/U  \sep \sigma, t/u
               \tag{Global substitutions}
\end{alignat*}
Then we define the typing rules as:
\begin{mathpar}
  \inferrule*
  {\judge[L] \Psi}
  {\ptyping{\cdot}{\cdot}}

  \inferrule*
  {\ptyping{\sigma}{\Phi} \\ \lpjudge \proglevel \Gamma}
  {\ptyping{\sigma, \Gamma/g}{\Phi, g: \Ctx}}

  \inferrule*
  {\ptyping{\sigma}{\Phi} \\ \lpjudge[\Phi] \proglevel \Gamma \\\\ \typing[L]l\Level \\ i \in \{\codelevel, \proglevel\}
    \\
    \lttypwf[\Psi][\Gamma[\sigma]] i{T}l}
  {\ptyping{\sigma, T/U}{\Phi, u : \DTyp i l}}
  
  \inferrule*
  {\ptyping{\sigma}{\Phi} \\ \lttypwf[\Phi] \proglevel T l \\ \typing[L]l\Level \\
    i \in \{\varlevel, \codelevel\} \\
    \lttyping[\Psi][\Gamma[\sigma]] i t{T[\sigma]}l}
  {\ptyping[\Psi]{\sigma, t/u}{\Phi, u : \DTrm{i} T l}}
\end{mathpar}

Then we consider the cases for application:
\begin{longtable}{R@{$[\sigma]~:=~$}L}
  \Nat & \Nat \\
  \PI{l}{l'}x S T & \PI{l}{l'}x{S[\sigma]}{(T[\sigma])} \\
  \Ty l & \Ty{l} \\
  \UPI \ell l T &
  \UPI\ell{l}{(T[\sigma])} \\
  \Elt l t & \Elt{l}{(t[\sigma])} \\
  U^{\delta} & \sigma(U)[\delta[\sigma]]
  \hfill \text{(lookup of $U$ in $\sigma$)}\\
  \CPI g l T & \CPI g{l}{(T[\sigma,g/g])} \\
  \TPI U l{l'}T & \TPI U[\Gamma[\sigma]]{l}{l'}{(T[\sigma,U^\id/U])} \\
  \CTyp l & \CTyp[\Gamma[\sigma]]{l} \\
  \CTrm T l & \CTrm[\Gamma[\sigma]]{T[\sigma]}{l} \\[5pt]
  \cdot & \cdot \\
  g & \sigma(g) \\
  \Gamma, x : T \at l & \Gamma[\sigma], x : T[\sigma] \at{l} \\[5pt]
  \wk_g^k & \wk^k_{\sigma(g)} \\
  \multicolumn{2}{r}{(a local weakening extending the local context by length $\metalevel$)}\\
  \cdot^k  & \cdot^k \\
  \cdot_{g}^k & \cdot^{|\Gamma| + \metalevel}
  \hfill \text{(if $\sigma(g) = \Gamma$ and $\Gamma$ ends with a $\cdot$)} \\
  \cdot_{g}^k  & \cdot_{g'}^{|\Gamma| + \metalevel}
  \hfill \text{(if $\sigma(g) = \Gamma$ and $\Gamma$ ends with a $g'$)} \\
  (\delta, t/x) & (\delta[\sigma]), t[\sigma]/x \\[5pt]
  x & x
  \hfill \text{(no effect on local variables)}\\
  \Nat & \Nat \\
  \PI{l}{l'}x s t & \PI{l}{l'}x{s[\sigma]}{(t[\sigma])} \\
  \Ty l & \Ty l \\
  \ze & \ze \\
  \su t & \su{(t[\sigma])} \\
  \ELIMN l{x.M}{s}{x,y.s'}{t} &
  \ELIMN{l}{x.M[\sigma]}{(s[\sigma])}{x,y.s'[\sigma]}{(t[\sigma])}
  \\
  \LAM{l}{l'}x{S}t & \LAM{l}{l'}x{S[\sigma]}{(t[\sigma])} \\
  \APP t{l}{l'}x{S}T s & \APP{t[\sigma]}{l}{l'}x{S[\sigma]}{T[\sigma]}{(s[\sigma])}
  \\
  \ULAM l \ell t & \ULAM{l}\ell{t[\sigma]}
  \\
  \UAPP t l & \UAPP{t[\sigma]}l \\
  u^{\delta} & \sigma(u)[\delta[\sigma]]
  \hfill \text{(lookup of $u$ in $\sigma$)} \\
  \CLAM l g t & \CLAM{l}g{(t[\sigma,g/g])} \\
  \CAPP t \Gamma & \CAPP{(t[\sigma])}{(\Gamma[\sigma])} \\
  \TLAM l{l'} U t & \TLAM{l}{l'} U{(t[\sigma,U^\id/U])} \\
  \TAPP t T & \TAPP{(t[\sigma])}{(T[\sigma])} \\
  \boxit T & \boxit{(T[\sigma])} \\
  \boxit t & \boxit{(t[\sigma])} \\
  \LETBTYP{l'}l\Gamma{x_T. M}{U}{t'}t & \LETBTYP{l'}{l}{\Gamma}{x_T. M[\sigma]}{U}{t'[\sigma,U^\id/U]}{(t[\sigma])} \\
  \LETBTRM{l'}l\Gamma T{x_t. M}{u}{t'}t &
  \LETBTRM{l'}{l}{\Gamma}{T}{x_t. M[\sigma]}{u}{t'[\sigma,u/u]}{(t[\sigma])}
  \\
  \ELIMTYP{l_1}{l_2}Mbl\Gamma t & \ELIMTYPn{l_1}{l_2}{(\vect M[\sigma])}{(\vect b[\sigma])}{l}{(\Gamma[\sigma])}{(t[\sigma])} \\
  \ELIMTRM{l_1}{l_2}Mbl\Gamma T t & \ELIMTRMn{l_1}{l_2}{(\vect
    M[\sigma])}{(\vect b[\sigma])}{l}{(\Gamma[\sigma])}{(T[\sigma])}{(t[\sigma])}
  \\[5pt]
  \vect M & (\ell,g,x_T. M_\Typ[\sigma,g/g])~(\ell,g,U_T,x_t. M_\Trm[\sigma,g/g,U_T^{\id}/U_T]) \\[5pt]
  (g. t_\Nat) & (g. t_\Nat[\sigma,g/g]) \\
  (\ell,\ell',g, U_S,U_T,x_S,x_T. t_\Pi) & (\ell,\ell',g, U_S,U_T,x_S,x_T. t_\Pi[\sigma,g/g,U_S^\id/U_S,U_T^\id/U_T])  \\
  (\ell,g. t_{\Se}) & (\ell,g. t_{\Se}[\sigma,g/g]) \\
  (\ell,g,u_t,x_t. t_\tEl) & (\ell,g,u_t,x_t. t_\tEl[\sigma,g/g,u_t^\id/u_t]) \\[5pt]
  (\ell,g,U_T,u_x. t_x) & (\ell,g,U_T,u_x. t_x[\sigma,g/g,U_T^\id/U_T,u_x^\id/u_x]) \\
  (g.t'_\Nat) & (g.t'_\Nat[\sigma,g/g]) \\
  (\ell,\ell',g,u_s,u_t,x_s,x_t.t'_\Pi) & (\ell,\ell',g,u_s,u_t,x_s,x_t.t'_\Pi[\sigma,g/g,u_s^\id/u_s,u_t^\id/u_t]) \\
  (\ell,g.t'_\Se) & (\ell,g.t'_\Se[\sigma,g/g]) \\
  (g.t_\ze) & (g.t_\ze[\sigma,g/g]) \\
  (g,u_t,x_t.t_{\tsucc}) & (g,u_t,x_t.t_{\tsucc}[\sigma,g/g,u_t^\id/u_t]) \\
  (\ell,g,U_M,u_s,u_{s'},u_t,x_M,x_s,x_{s'},x_t.t_{\telimn}) & \\
  \multicolumn{2}{R}{(\ell,g,U_M,u_s,u_{s'},u_t,x_M,x_s,x_{s'},x_t.t_{\telimn}[\sigma,g/g,U_M^\id/U_M,u_s^\id/u_s,u_{s'}^\id/u_{s'},u_t^\id/u_t])} \\
  (\ell,\ell',g,U_S,U_T,u_t,x_S,x_t.t_\lambda) & (\ell,\ell',g,U_S,U_T,u_t,x_S,x_t.t_\lambda[\sigma,g/g,U_S^\id/U_S,U_T^\id/U_T,u_t^\id/u_t]) \\
  (\ell,\ell',g,U_S,U_T,u_t,u_s,x_S,x_T,x_t,x_s.t_\tapp) & \\
  \multicolumn{2}{R}{(\ell,\ell',g,U_S,U_T,u_t,u_s,x_S,x_T,x_t,x_s.t_\tapp[\sigma,g/g,U_S^\id/U_S,U_T^\id/U_T,u_{t}^\id/u_{t},u_s^\id/u_s])} \\
\end{longtable}
Following \Cref{sec:cv}, we give the identity global substitution as a special case of
global weakenings, and composition. %
\begin{align*}
  \wk^k_{\cdot} &:=  \cdot \\
  \wk^k_{\Psi, g : \Ctx} &:=  \wk_{\Psi}^{1+k}, g/g \\
  \wk^k_{\Psi, U : \DTyp i l} &:= \wk^{1 + k}_{\Psi}, U^{\id_\Gamma}/U \\
  \wk^k_{\Psi, u : \DTrm i T l} &:= \wk^{1 + k}_{\Psi}, u^{\id_\Gamma}/u
\end{align*}
As a special case, we have
\begin{align*}
  \id_{\Psi} := \wk^0_{\Psi}
\end{align*}
Moreover, we have composition
\begin{align*}
  \cdot \circ \sigma' &:= \cdot \\
  (\sigma, \Gamma/g) \circ \sigma' &:= (\sigma \circ \sigma'), \Gamma[\sigma']/g \\
  (\sigma, T/U) \circ \sigma' &:= (\sigma \circ \sigma'), T[\sigma']/U \\
  (\sigma, t/u) \circ \sigma' &:= (\sigma \circ \sigma'), t[\sigma']/u 
\end{align*}

\subsection{Properties of Substitutions}

In the next step, we examine the algebraic properties of all substitutions. %
In the lemmas below, we always assume well-formedness or well-typedness of the
subjects in the lemmas, unless the lemmas are about typing. %
For conciseness, we do not spell out the conditions as they are routine.
\begin{lemma} $ $
  \begin{itemize}
  \item $\typing[L]{\id_L}{L}$
  \item If $\typing[L]{\phi}{L'}$ and $\typing[L']{\phi'}{L''}$, then
    $\typing[L]{\phi' \circ \phi}{L''}$.
  \end{itemize}
\end{lemma}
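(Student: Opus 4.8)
The plan is to prove both clauses by straightforward structural inductions, leaning on the two level-substitution lemmas just established: that $\typing[L]l\Level$ together with $\typing[L']\phi L$ gives $\typing[L']{l[\phi]}\Level$, and its equivalence analogue (only the first is needed here). Before starting, I would record the obvious \emph{weakening} facts for universe contexts — if $\typing[L]l\Level$ then $\typing[L,\ell]l\Level$, and if $\typing[L]\phi{L'}$ then $\typing[L,\ell]\phi{L'}$ — each proved by a one-line induction, since the level well-formedness rules are purely structural and variable lookup ($\ell \in L$) is unrestricted. These are exactly the facts the excerpt silently assumes when it says it does not track weakenings for universe contexts.

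For the first clause, $\typing[L]{\id_L}{L}$, I would induct on the universe context $L$, matching the recursive definition of $\id_L$. When $L = \cdot$, we have $\id_\cdot = \cdot$ and the base rule gives $\typing[\cdot]{\cdot}{\cdot}$. When $L = L_0, \ell$, we have $\id_{L_0,\ell} = \id_{L_0}, \ell/\ell$; the induction hypothesis gives $\typing[L_0]{\id_{L_0}}{L_0}$, which weakening lifts to $\typing[L_0,\ell]{\id_{L_0}}{L_0}$, and $\typing[L_0,\ell]{\ell}{\Level}$ is immediate from $\ell \in L_0,\ell$. The cons rule for level substitutions then yields $\typing[L_0,\ell]{\id_{L_0,\ell}}{L_0,\ell}$.

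For the second clause I would induct on the derivation of $\typing[L']{\phi'}{L''}$ (equivalently on $L''$), since composition $\phi' \circ \phi$ recurses on its first argument. If $\phi' = \cdot$ and $L'' = \cdot$, then $\phi' \circ \phi = \cdot$ and the base rule applies directly. If $\phi' = \phi'_0, l/\ell$ with $\typing[L']{\phi'_0}{L''_0}$ and $\typing[L']l\Level$, then by definition $\phi' \circ \phi = (\phi'_0 \circ \phi), l[\phi]/\ell$; the induction hypothesis gives $\typing[L]{\phi'_0 \circ \phi}{L''_0}$, the level-substitution lemma applied to $\typing[L']l\Level$ and $\typing[L]\phi{L'}$ gives $\typing[L]{l[\phi]}{\Level}$, and one further application of the cons rule closes the case.

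I do not anticipate any genuine obstacle here: the entire argument is bookkeeping over the two inductions above. The only point that deserves a sentence of care is the weakening step in the identity case, precisely because the excerpt deliberately suppresses weakening infrastructure for universe contexts — but since that weakening is itself a trivial structural induction, it costs nothing to discharge.
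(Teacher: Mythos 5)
Your proof is correct and follows the same route the paper gestures at (the paper's own proof is just ``Analyze the definition of identity and composition''); you have simply spelled out the two inductions. The one nontrivial point — that the identity case needs weakening of $\typing[L_0]{\id_{L_0}}{L_0}$ to $\typing[L_0,\ell]{\id_{L_0}}{L_0}$ before the cons rule applies, and that this weakening fact itself needs a one-line justification since the paper suppresses all weakening infrastructure for universe contexts — you identify and discharge correctly.
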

\begin{proof}
  Analyze the definition of identity and composition.
\end{proof}
\begin{lemma}[Algebra of Universe Substitutions] $ $
  \begin{itemize}
  \item $l[\phi][\phi'] = l[\phi \circ \phi']$ ($T$, $\Gamma$, $\delta$, $t$, $\Psi$ resp.)
  \item $l[\id] = l$ ($T$, $\Gamma$, $\delta$, $t$, $\Psi$ resp.)
  \item $\id \circ \phi = \phi$ and $\phi \circ \id = \phi$
  \item $(\phi_1 \circ \phi_2) \circ \phi_3 = \phi_1 \circ (\phi_2 \circ \phi_3)$
  \end{itemize}
\end{lemma}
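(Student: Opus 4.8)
The plan is to prove the four statements by structural induction, in an order that respects the dependency between the syntactic classes. The kernel of the argument is the first bullet specialised to universe levels, $l[\phi][\phi'] = l[\phi \circ \phi']$, which I would prove first, by induction on $l$. The cases $0$ and $\omega$ are immediate; for $1 + l$ and $l \sqcup l'$ the claim follows from the (purely congruence) clauses of substitution application together with the induction hypotheses; the only case with content is $\ell$, where $\ell[\phi][\phi'] = \phi(\ell)[\phi']$ must be shown equal to $(\phi \circ \phi')(\ell)$, and this is exactly the defining clause $(\phi',l/\ell) \circ \phi := (\phi' \circ \phi), l[\phi]/\ell$ read off after an easy side induction on $\phi$ identifying the lookup of $\ell$ in $\phi \circ \phi'$.

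I would then propagate this level equation to types, local contexts, local substitutions, terms, and --- separately, since it need not be mutual --- global contexts, via one large mutual structural induction following the substitution-application tables of \Cref{sec:dt:syn}. Every clause there either leaves its universe-level arguments untouched (all modal constructs, $\Nat$, local and global variables, $\ze$, $\boxit{(\cdot)}$, and so on) or rewrites them with the level equation just proved, after which the induction hypotheses close the case. The one point that needs genuine care, and which I expect to be the main (if mild) obstacle, is the treatment of the constructs that extend the universe context --- $\UPI \ell l T$, $\ULAM l \ell t$, and the motives $\vect M$ and each of the branches in $\vect b_\Typ$ and $\vect b_\Trm$ of the recursive principles --- since these are substituted under $\phi,\vect\ell/\vect\ell$. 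For them I first prove the commutation $(\phi,\vect\ell/\vect\ell) \circ (\phi',\vect\ell/\vect\ell) = (\phi \circ \phi'),\vect\ell/\vect\ell$, which reduces to the definition of composition plus the observations that the freshly introduced $\vect\ell$ map to themselves and that appending those unused bindings to $\phi'$ does not change its action on levels over the original context; with this the induction hypothesis applied under the extended context finishes the case. The case analysis here is long only because \delamlang has many universe-binding constructs and the recursive principles carry long branch lists; no new idea is involved.

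The identity law $l[\id] = l$ is proved in the same shape: first for levels by induction on $l$, with $\ell[\id_L] = \id_L(\ell) = \ell$ in the variable case, then pushed through the remaining classes by mutual induction exactly as above, using that $\id$ extended by a fresh variable is still definitionally an identity substitution. Bullet three then follows by induction on $\phi$: clause by clause, $\id \circ \phi$ and $\phi \circ \id$ both amount to replacing every level $l$ stored in $\phi$ by $l[\id]$, which bullet two shows is $l$. Finally, associativity $(\phi_1 \circ \phi_2) \circ \phi_3 = \phi_1 \circ (\phi_2 \circ \phi_3)$ is an induction on $\phi_1$; the cons case $(\phi_1',l/\ell)$ needs $l[\phi_2][\phi_3] = l[\phi_2 \circ \phi_3]$, which is the level instance of bullet one, and the recursion is discharged by the induction hypothesis. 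In sum, everything rests on the level equation of bullet one and the commutation of composition with context extension, and the remainder is routine bookkeeping.
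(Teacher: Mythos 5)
Your proposal is correct and matches the paper's proof, which is stated tersely as: prove the first two statements first for universe levels by induction on $l$, then by mutual induction on all applications of universe substitutions, and analyze the definition of composition for the last two. You have simply spelled out the routine details the paper elides, including the key commutation $(\phi,\vect\ell/\vect\ell) \circ (\phi',\vect\ell/\vect\ell) = (\phi \circ \phi'),\vect\ell/\vect\ell$ needed for the universe-binding constructs.
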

\begin{proof}
  The proofs are routine; the first two statements are proved by induction on $l$
  first, and then mutual induction on all applications of universe substitutions. %
  The last two we analyze the definition of composition. 
\end{proof}

Similar lemmas hold for local and global substitutions.
\begin{lemma} $ $
  \begin{itemize}
  \item $\ltsubst[\Psi][\Gamma,\Delta]i{\wk_{\Gamma}^{|\Delta|}}{\Gamma}$
  \item $\ltsubst i{\id_{\Gamma}}{\Gamma}$
  \end{itemize}
\end{lemma}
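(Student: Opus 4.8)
The plan is to prove both clauses simultaneously by induction on the structure of $\Gamma$, the context that is also the subscript of $\wk$; the second clause then falls out as the instance $|\Delta| = 0$, since $\id_\Gamma = \wk^0_\Gamma$ by definition. Throughout, following the standing convention, I assume $\lpjudge{\typeof i}{\Gamma, \Delta}$, the well-formedness of the ambient local context, from which every side condition appearing in the typing rules for local substitutions can be read off by inversion. This is the direct dependent-types analogue of the lemma ``Typing of Local Weakening Substitutions'' and its corollary from \Cref{sec:cv}, so the shape of the argument is the same; only the bookkeeping is heavier.

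For the base cases: if $\Gamma = \cdot$ then $\wk^{|\Delta|}_\cdot = \cdot^{|\Delta|}$, and the goal $\ltsubst[\Psi][\cdot,\Delta]i{\cdot^{|\Delta|}}{\cdot}$ is an instance of the typing rule for $\cdot^k$, whose premises are that the ambient context ends with $\cdot$ and has length $|\Delta|$ — both immediate from the shape of the concatenation $\cdot, \Delta$. If $\Gamma = g$ then $\wk^{|\Delta|}_g = \wk_g^{|\Delta|}$; inverting $\lpjudge{\typeof i}{g, \Delta}$ gives $g : \Ctx \in \Psi$, the ambient context ends with $g$, and has length $|\Delta|$, so the typing rule for $\wk_g^k$ applies. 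The interesting case is $\Gamma = \Gamma', x : T \at l$, where $\wk^{|\Delta|}_{\Gamma', x : T \at l} = \wk^{1 + |\Delta|}_{\Gamma'}, x/x$. Here I would apply the induction hypothesis to $\Gamma'$ with the extension taken to be $(x : T \at l, \Delta)$, which has length $1 + |\Delta|$; this is legitimate because $\Gamma', (x : T \at l, \Delta)$ is, up to associativity of concatenation, just $(\Gamma', x : T \at l), \Delta = \Gamma, \Delta$, whose well-formedness is our hypothesis. The IH yields $\ltsubst[\Psi][\Gamma,\Delta]i{\wk^{1 + |\Delta|}_{\Gamma'}}{\Gamma'}$, and the cons rule for local substitutions then reduces the goal to three obligations: $\typing[L]l\Level$ and $\lttypwf[\Psi][\Gamma']{\typeof i}T l$, both obtained by inverting the well-formedness of $\Gamma$ (noting $\typeof(\typeof i) = \typeof i$, so a sub-context of a $\typeof i$-well-formed context has its types at layer $\typeof i$); and $\lttyping[\Psi][\Gamma,\Delta]i{x}{T[\wk^{1 + |\Delta|}_{\Gamma'}]}{l}$. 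The variable rule gives $x$ its declared type in $\Gamma, \Delta$, and it remains to identify that type with $T[\wk^{1 + |\Delta|}_{\Gamma'}]$, using the fact that $\wk^{k}_{\Gamma'}$ acts on a type well-formed over $\Gamma'$ exactly as the corresponding iterated local weakening does — the dependent analogue of \Cref{lem:cvar:wk-p-n-ty} together with \Cref{lem:cvar:loc-id}. With that in hand the case closes, and instantiating $\Delta = \cdot$ discharges the second clause.

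The main obstacle is precisely the last step of the cons case: lining up the length index $1 + |\Delta|$ carried by the recursive $\wk$ with the number of bindings that $x$ has been weakened past inside $\Gamma, \Delta$, and reconciling $T[\wk^{1 + |\Delta|}_{\Gamma'}]$ with the declared type of $x$. This needs the ``$\wk$-substitution behaves like iterated weakening on types'' lemma ported to the dependent setting, plus care that the universe-level and $\typeof$-layer annotations match (so that, if needed, the conversion rule for typing bridges a definitional equality rather than a syntactic one). Everything else — the two base cases, associativity of concatenation, and the arithmetic $|x : T \at l, \Delta| = 1 + |\Delta|$ — is routine.
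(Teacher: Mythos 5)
Your proposal is essentially correct and mirrors the shape of the simply-typed ``Typing of Local Weakening Substitutions'' lemma and its corollary in \Cref{sec:cv}, which is exactly the intended route; the paper itself states this lemma in the dependent setting without proof, so your induction on $\Gamma$ with the telescope $\Delta$ as auxiliary parameter, and the derivation of the identity case by setting $\Delta = \cdot$, is the natural argument. The one thing worth flagging is the citation at the crux of the cons case: \Cref{lem:cvar:wk-p-n-ty} concerns \emph{global} weakening substitutions ($T[\wk^n_\Psi] = T[p^n(\id)]$), whereas what you actually need is the \emph{local} analogue — that for $T$ well-formed over $\Gamma'$, the declared type of $x$ looked up in $\Gamma', x : T \at l, \Delta$ coincides (up to conversion) with $T[\wk^{1+|\Delta|}_{\Gamma'}]$, where the superscript bump is precisely what updates the base-constructor indices of local substitutions embedded in global variables occurring inside $T$ (since $\widehat{\wk^{k}_{\Gamma'}} = |\Gamma'| + k$). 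That helper fact is not explicitly stated in the paper, and \Cref{lem:cvar:loc-id} only covers $k=0$; you correctly hedge by calling it a ``dependent analogue,'' but a reader implementing this would need to state and prove that local lemma separately. Otherwise the base cases, the $\typeof(\typeof i) = \typeof i$ observation, and the associativity-of-telescope bookkeeping are all handled correctly.
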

Note that local substitutions permit $i = \varlevel$, so we have $\ltsubst
\varlevel{\id_{\Gamma}}{\Gamma}$. %
This intuitively makes sense, as all terms in $\id_{\Gamma}$ are just local variables. 
\begin{lemma}[Algebra of Local Substitutions] $ $
  \begin{itemize}
  \item $T[\delta][\delta'] = T[\delta \circ \delta']$ ($t$ resp.)
  \item $T[\id] = T$ ($t$ resp.)
  \item $\id \circ \delta = \delta$ and $\delta \circ \id = \delta$
  \item $(\delta_1 \circ \delta_2) \circ \delta_3 = \delta_1 \circ (\delta_2 \circ \delta_3)$
  \end{itemize}
\end{lemma}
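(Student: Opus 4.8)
The plan is to prove all four statements simultaneously by one mutual structural induction on the object being substituted into --- a type $T$, a term $t$, or a local substitution $\delta$ --- carrying a single bundled induction hypothesis. Although the four clauses are interdependent, each only ever appeals to the others on a \emph{strictly smaller} subject, so a structural measure on the primary subject decreases throughout and the induction is well-founded: functoriality of application, $T[\delta][\delta'] = T[\delta\circ\delta']$ and $t[\delta][\delta'] = t[\delta\circ\delta']$, invokes associativity only on the substitution $\delta''$ inside a global-variable node $U^{\delta''}$ or $u^{\delta''}$, where $U^{\delta''}[\delta][\delta'] = U^{(\delta''\circ\delta)\circ\delta'}$ must be reconciled with $U^{\delta''\circ(\delta\circ\delta')}$; associativity $(\delta_1\circ\delta_2)\circ\delta_3 = \delta_1\circ(\delta_2\circ\delta_3)$ invokes functoriality only on the term $t$ inside a cons cell $(\delta_1',t/x)$; and the identity clauses $T[\id]=T$, $t[\id]=t$, $\delta\circ\id=\delta$ recurse only on smaller subjects and on right-identity of composition. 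The universe-level annotations $l$ and $\vect l$ carried by the type and term constructors are untouched by local substitution, so all such subcases are immediate, and the universe context is inert here.

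Before the main induction I would establish the ``commutation with local weakening'' facts, which are the \delamlang\ analogues of the composition, associativity and weakening-commutation lemmas for local substitutions in \Cref{sec:cv} (in particular \Cref{lem:cvar:lsubst-skip}): $t[\delta][\tau] = t[\delta[\tau]]$, $(\delta\circ\delta')[\tau] = \delta\circ(\delta'[\tau])$, and the skipping identity for a binding not referenced by a substitution. From these one derives the key identities $(\delta,x/x)\circ(\delta',x/x) = (\delta\circ\delta'),x/x$ and $\id_{\Gamma,x:T} = (\id_\Gamma,x/x)$ for the canonical extension of a substitution along a freshly bound variable (the operation written $\delta,x/x$ in the substitution tables). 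These discharge, uniformly and mechanically, every binder case --- $\PI{l}{l'}x S T$, $\LAM l{l'}x S t$, $\APP t l{l'}x S T s$, $\ELIMN l{x.M}s{x,y.s'}t$, both $\tletbox$-elimination forms, and, the only genuinely bulky part, the two recursive principles $\ELIMTYP{l_1}{l_2}Mbl\Gamma t$ and $\ELIMTRM{l_1}{l_2}Mbl\Gamma T t$, whose motives $\vect M$ and each branch in $\vect b$ bind their own variables and so require one such identity under each binder --- by pushing the substitution inward, applying the induction hypothesis to the immediate subobjects, and reassembling. For the base cases of local substitutions, $\cdot^k$, $\cdot^k_g$, $\wk^k_g$ and their identity instances, I would first prove the bookkeeping facts $\widehat{\delta_1\circ\delta_2} = \widehat{\delta_2}$ and $\widecheck{\delta_1\circ\delta_2} = \widecheck{\delta_2}$ by a short induction on $\delta_1$; combined with the assumed well-typedness of the substitutions (which determines whether a codomain context bottoms out at $\cdot$ or at some $g$), these settle the remaining empty-substitution and identity cases.

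The real difficulty --- and the reason the analogous lemma of \Cref{sec:cv} was left to the Agda mechanization --- is organizational rather than conceptual: keeping the three mutually defined substitution-style operations (for universe levels, local contexts, and global contexts) and the two weakening operations cleanly apart while threading the $\widehat{\cdot}$ and $\widecheck{\cdot}$ data through the empty-substitution base cases, and making the bundled-induction measure explicit enough that the mutual appeals between the functoriality and associativity clauses are visibly decreasing. Since \delamlang\ only enlarges the signature over \Cref{sec:cv} --- Tarski-style universe codes and the mutual recursive principles for code --- without changing the shape of the substitution calculus, the argument is structurally the same, and I would carry out the full, tedious case enumeration (in particular all the branch cases of the recursive principles) in the accompanying mechanization, recording here only the skeleton above.
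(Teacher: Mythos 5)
Your plan and the paper's proof sketch are in essential agreement: the paper likewise proves functoriality and associativity together by mutual induction, and the application-of-identity clause together with right identity, observing (as you do with $\id_{\Gamma,x:T} = (\id_\Gamma,x/x)$) that extended substitutions under binders are themselves identities. One small calibration: in \delamlang\ the extension under a binder is the bare $\delta,x/x$ (names, no explicit local weakening in the substitution tables), so the prerequisite you want is really the skipping identity $\delta\circ(\delta',x/x)=\delta\circ\delta'$ and its consequences $(\delta,x/x)\circ(\delta',x/x)=(\delta\circ\delta'),x/x$, rather than the de Bruijn-style weakening commutations carried over from \Cref{sec:cv}; you do state the needed consequences, so this is a matter of emphasis rather than a gap. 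Your bookkeeping facts for $\widehat{\cdot}$ and $\widecheck{\cdot}$ on compositions are exactly what the base cases ($\cdot^k$, $\cdot^k_g$, $\wk^k_g$) require.
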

\begin{proof}
  The first statement is mutually proved with associativity and by mutual
  induction. The second statement is mutually proved with right identity and also by
  mutual induction. %
  When proving right identity, we realize that all extended local substitutions under
  binders are identities.
\end{proof}

Then we reason about global substitutions.
\begin{lemma} $ $
  \begin{itemize}
  \item $\ptyping[\Psi,\Phi] {\wk_{\Psi}^{|\Phi|}}{\Psi}$
  \item $\ptyping {\id_{\Psi}}{\Psi}$
  \end{itemize}
\end{lemma}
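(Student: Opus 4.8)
The plan is to prove the first item by induction on the derivation of $\judge[L]\Psi$ --- obtained from the hypothesis $\judge[L]{\Psi,\Phi}$ by inverting the trailing bindings --- keeping $\Phi$ universally quantified in the induction hypothesis; the second item is then the instance $\Phi=\cdot$, since there $|\Phi|=0$, $\wk^0_{\Psi}=\id_{\Psi}$, and $\Psi,\cdot=\Psi$. The argument follows the proof of the ``Global Weakening Substitutions'' lemma of \Cref{sec:cv} very closely, the only novelties being that \delamlang has four binding forms instead of two and that every binding carries a universe level; the latter is harmless, because there is no weakening on universe contexts, so a binding at level $l$ stays at level $l$.

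In the base case $\Psi=\cdot$ we have $\wk^{|\Phi|}_{\cdot}=\cdot$ and the empty global substitution is well-typed in the well-formed ambient context $\cdot,\Phi=\Phi$. In the inductive step $\Psi=\Psi_0,B$, applying the induction hypothesis with $\Phi$ replaced by $B,\Phi$ (note $\wk^{|B,\Phi|}_{\Psi_0}=\wk^{1+|\Phi|}_{\Psi_0}$) gives $\ptyping[\Psi_0,B,\Phi]{\wk^{1+|\Phi|}_{\Psi_0}}{\Psi_0}$, and it remains to discharge the last premise of the global-substitution typing rule that extends the codomain by $B$. If $B=g:\Ctx$, the new component of $\wk^{|\Phi|}_{\Psi}$ is $g/g$ and the outstanding premise $\lpjudge[\Psi_0,g:\Ctx,\Phi]{\proglevel}{g}$ holds because $g:\Ctx$ occurs in the ambient global context. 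If $B=U:\DTyp[\Gamma]{i}{l}$ with $i\in\{\codelevel,\proglevel\}$, the new component is $U^{\id_{\Gamma}}$; the side conditions that $\Gamma$ is well-formed at layer $\proglevel$, that $\typing[L]{l}{\Level}$, and that $i\in\{\codelevel,\proglevel\}$ come from inverting $\judge[L]{\Psi}$, and the substantive obligation is that $U^{\id_{\Gamma}}$ is a well-formed type at layer $i$ in $\Gamma$ globally weakened by $p^{1+|\Phi|}(\id)$. By the \delamlang counterpart of \Cref{lem:cvar:wk-p-n-ty} that weakened context equals $\Gamma[\wk^{1+|\Phi|}_{\Psi_0}]$; the weakened lookup rules place $U$ with that local context into the ambient context; and then the type-formation rule for $U^{\delta}$ reduces the goal to knowing that the (automatically weakened) identity local substitution targets that context, which is an instance of the identity-substitution lemma stated just above, combined with the right-identity law of the algebra of local substitutions and the (routine) global-weakening lemma for \delamlang terms and local substitutions. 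The case $B=u:\DTrm[\Gamma]{i}{T}{l}$ with $i\in\{\varlevel,\codelevel\}$ is identical, additionally using $T[\id]=T$ so that the typing rule for $u^{\delta}$ assigns $u^{\id_{\Gamma}}$ its expected type $T[\wk^{1+|\Phi|}_{\Psi_0}]$.

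The one genuinely delicate ingredient is the prerequisite just invoked: the \delamlang analog of \Cref{lem:cvar:wk-p-n-ty} and \Cref{lem:cvar:wk-p-n-tm}, which states that $\wk^n_{\Psi}$ acts on types, local contexts, local substitutions, and terms exactly as the global weakening $p^n(\id)$. As in \Cref{sec:cv} its proof needs an intricate generalization to handle the on-the-fly extensions of the global context that occur under binders --- now not merely $\tletbox$ and $\Lambda$, but both box-elimination operations, the universe-polymorphic abstraction, and the two mutually recursive principles for code --- so I would establish that lemma first, or simply borrow it from the Agda mechanization. Granting it, the present lemma is a routine induction; the sole residual work is lining up the de Bruijn indices of the weakened $\id_{\Gamma}$ occurring inside $\wk^{|\Phi|}_{\Psi}$ with the shape expected by the lookup rules and by the formation and typing rules for $U^{\delta}$ and $u^{\delta}$, which is mechanical.
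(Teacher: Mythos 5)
Your proof is correct and mirrors almost verbatim the paper's own proof of the analogous ``Global Weakening Substitutions'' lemma in \Cref{sec:cv}, which is exactly what the paper implicitly intends here (the statement is left without proof in the \delamlang section, under the blanket policy of ``taking weakenings for granted''). You also correctly single out the one nontrivial prerequisite --- the \delamlang analogue of \Cref{lem:cvar:wk-p-n-ty,lem:cvar:wk-p-n-tm} --- and correctly flag it as the delicate step the paper itself deferred to the Agda mechanization in Section~2.
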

\begin{lemma}[Algebra of Global Substitutions] $ $
  \begin{itemize}
  \item $T[\sigma][\sigma'] = T[\sigma \circ \sigma']$ ($\Gamma$, $\delta$, $t$ resp.)
  \item $T[\id] = T$ ($\Gamma$, $\delta$, $t$ resp.)
  \item $\id \circ \sigma = \sigma$ and $\sigma \circ \id = \sigma$
  \item $(\sigma_1 \circ \sigma_2) \circ \sigma_3 = \sigma_1 \circ (\sigma_2 \circ \sigma_3)$
  \end{itemize}
\end{lemma}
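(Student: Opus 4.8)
The plan is to establish the four statements by a single simultaneous mutual induction over the grammar of types, local contexts, local substitutions, and terms (including the branch data $\vect M$ and $\vect b$ of the recursive principles), pairing the statements that mutually depend on one another. Concretely, $T[\sigma][\sigma'] = T[\sigma \circ \sigma']$ --- together with its instances for $\Gamma$, $\delta$, and $t$ --- has to be proved simultaneously with the associativity law $(\sigma_1 \circ \sigma_2) \circ \sigma_3 = \sigma_1 \circ (\sigma_2 \circ \sigma_3)$, because $\sigma \circ \sigma'$ is defined by applying $\sigma'$ to every context, type, and term stored inside $\sigma$, so unfolding the composition in the associativity statement reduces directly to the distributivity statement on those components. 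In the same way, $T[\id] = T$ (and its $\Gamma$, $\delta$, $t$ instances) is proved together with the right-identity law $\sigma \circ \id = \sigma$. This is precisely the bookkeeping already done for the two-layered fragment in \Cref{sec:cv}; the routine cases carry over verbatim, so the report would defer them to the mechanization and dwell only on the genuinely new cases of \delamlang.

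The cases that need actual work are the binder cases, where a global substitution must traverse a construct that extends the global context: $\CPI g l T$, $\TPI U l{l'}T$, $\CTyp l$, $\CTrm T l$, both forms of $\tletbox$, and each of the two recursive principles with all of their branches. For these one uses the abbreviation $q(\sigma)$ for $\sigma[p(\id)], g/g$ (respectively $\sigma[p(\id)], U^\id/U$ or $\sigma[p(\id)], u^\id/u$), and the argument requires the \delamlang analogues of the auxiliary lemmas of \Cref{sec:cv}: a naturality lemma relating $q(\sigma)$ to global weakenings; the skip lemma stating that pre-composing with $q^n(p(\id))$ deletes the $n$-th binding of a global substitution (generalizing \Cref{lem:cvar:gsubst-skip}); and the fact that the global weakening substitution $\wk^k_\Psi$ has the same action as the syntactic weakening $p^k(\id)$ (generalizing \Cref{lem:cvar:wk-p-n-ty,lem:cvar:wk-p-n-tm}). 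With these, every binder case reduces to the identities $q(\sigma) \circ q(\sigma') = q(\sigma \circ \sigma')$ and $q(\id) = \id$, which follow from the inductive hypotheses. The base cases of local substitutions --- $\cdot^k$, $\cdot^k_g$, $\wk^k_g$ --- also need care: applying $\sigma$ forces a case split on whether $\sigma(g)$ ends with $\cdot$ or with another context variable, and one must verify that the auxiliary functions $\widehat{\,\cdot\,}$ and $\widecheck{\,\cdot\,}$ are recomputed consistently, exactly as in the corresponding proof of \Cref{sec:cv}.

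The main obstacle is the left-identity law $\id \circ \sigma = \sigma$. As already flagged for the simple-type fragment in \Cref{sec:cv}, this does not go through by a naive induction: when $\sigma$ reaches a binder, the identity substitution on the extended global context has as its head component the identity weakened by $p(\id)$, so the statement one actually proves by induction must be the more general $\wk^k_\Psi \circ \sigma = \sigma$ (for well-typed $\sigma$ with a suitable domain), discharging the leftover weakening via the lemma identifying $\wk^k_\Psi$ with $p^k(\id)$. Scaling this to \delamlang means the generalization must additionally keep track of universe levels (bindings carry a level $l$) and of all three flavours of binding --- $g : \Ctx$, $U : \DTyp i l$, $u : \DTrm i T l$ --- at once. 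I expect that generalization, together with the sheer verbosity of pushing $\sigma$ through the roughly thirteen branches of each recursive principle, to absorb most of the otherwise mechanical effort, and it is the one place where I would lean on the Agda development rather than spell out the induction in the report.
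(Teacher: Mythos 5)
Your overall strategy --- a mutual structural induction that propagates the composition and identity equations through types, local contexts, local substitutions, terms, and the branch data of the recursors, using the \delamlang analogues of the skip lemma and the identification $\wk^k_\Psi = p^k(\id)$ at binders, together with a case split on $\widecheck{\delta}$ at the base local substitutions --- agrees with the paper. Where you diverge is left identity. The paper disposes of $\id \circ \sigma = \sigma$ in one clause, ``by simply looking at the definition of the identity global substitution,'' whereas you treat it as the main technical obstacle and introduce a generalization over arbitrary weakening substitutions. Your caution mirrors what the paper itself did for the two-layer system in \Cref{sec:cv}, where left identity explicitly ``requires certain generalization which must incorporate global weakenings.'' In the \delamlang presentation the explicit $p^{1+k}(\id)$ has been dropped from the definition of $\wk^k_\Psi$ because weakenings are ``taken for granted,'' which is what makes the paper's one-line claim look available; in a de Bruijn-style mechanization (the Agda development the paper defers to) the generalization you propose is unavoidable. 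One precision issue: as stated, $\wk^k_\Psi \circ \sigma = \sigma$ is not type-correct for $k > 0$. For $\sigma : \Phi \To_g \Psi, \Psi'$ with $|\Psi'| = k$, the composite $\wk^k_\Psi \circ \sigma$ should equal the truncation of $\sigma$ to the $\Psi$-bindings, and $\id \circ \sigma = \sigma$ is the $k = 0$, $\Psi' = \cdot$ instance. Finally, your pairing of the distributivity law with associativity is safe but not needed: distributivity relies only on the interaction law $t[\delta][\sigma] = t[\sigma][\delta[\sigma]]$ and not on associativity, so associativity can be proved afterwards as a corollary, which is what ``associativity is routine'' in the paper amounts to.
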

\begin{proof}
  The first two statements require mutual inductions on the applications of global
  substitutions. %
  Right identity is a natural consequence of the second statement. %
  Left identity is proved by simply looking at the definition of the identity global
  substitution. %
  Associativity is routine. 
\end{proof}

Finally, we conclude how all these kinds of substitutions interact.
\begin{lemma}[Acting on Weakenings]\labeledit{lem:dt:wkact} $ $
  \begin{itemize}
  \item $\wk^k_\Gamma[\phi] = \wk^k_{\Gamma[\phi]}$
  \item $\wk^k_\Psi[\phi] = \wk^k_{\Psi[\phi]}$
  \item $\wk^k_\Gamma[\sigma] = \wk^k_{\Gamma[\sigma]}$
  \end{itemize}  
\end{lemma}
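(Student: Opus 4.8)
The plan is to prove all three identities by straightforward structural induction, generalizing over the natural number $k$ in each case. The first and third statements are inductions on the local context $\Gamma$, the second is an induction on the global context $\Psi$, and the only interdependence is that the proof of the second invokes the first at $k = 0$ (recall $\id_\Gamma = \wk^0_\Gamma$). As the paper already remarks, these are routine; the only things to keep straight are the index shift ($1 + k$ under a binder) and the order in which the three items are proved.

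First I would prove $\wk^k_\Gamma[\phi] = \wk^k_{\Gamma[\phi]}$ by induction on $\Gamma$. When $\Gamma = \cdot$, both sides unfold to $\cdot^k$. When $\Gamma = g$, both sides unfold to $\wk_g^k$, since neither the action of $\phi$ on a local substitution nor its action on a local context touches the context variable. When $\Gamma = \Gamma', x : T \at l$, we unfold $\wk^k_{\Gamma', x : T \at l} = \wk^{1+k}_{\Gamma'}, x/x$, push $[\phi]$ through the cons using its defining equation, use $x[\phi] = x$, apply the induction hypothesis at $1 + k$ to rewrite $\wk^{1+k}_{\Gamma'}[\phi]$ as $\wk^{1+k}_{\Gamma'[\phi]}$, and recognize the outcome as $\wk^k_{(\Gamma', x : T \at l)[\phi]}$ after unfolding $[\phi]$ on the context. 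This step is entirely mechanical.

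Next, $\wk^k_\Psi[\phi] = \wk^k_{\Psi[\phi]}$ is an induction on $\Psi$. The empty and $g : \Ctx$ cases are immediate. In the $U : \DTyp i l$ case (and symmetrically for $u : \DTrm i T l$), the definition of $\wk^k_\Psi$ appends an entry $U^{\id_\Gamma}/U$, where $\Gamma$ is the local context of the binding; pushing $[\phi]$ through this entry yields $U^{\id_\Gamma[\phi]}/U$, whereas $\wk^k_{\Psi[\phi]}$ appends $U^{\id_{\Gamma[\phi]}}/U$. Matching the two therefore reduces exactly to $\id_\Gamma[\phi] = \id_{\Gamma[\phi]}$, which is the first statement specialized to $k = 0$; the $\wk^{1+k}_\Psi$ prefix is discharged by the induction hypothesis.

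Finally, $\wk^k_\Gamma[\sigma] = \wk^k_{\Gamma[\sigma]}$ is another induction on $\Gamma$. The case $\Gamma = \cdot$ gives $\cdot^k$ on both sides. The case $\Gamma = g$ is handled directly by the defining equation $\wk_g^k[\sigma] = \wk^k_{\sigma(g)}$, which is literally $\wk^k_{g[\sigma]}$. The cons case is identical to that of the first statement, using $x[\sigma] = x$ and the induction hypothesis at $1 + k$. The one point worth noting is that $\wk^k_\Gamma$ always has a base of the form $\cdot^{k'}$ or $\wk_g^{k'}$, never $\cdot_{g}^{k'}$, so the delicate sub-cases of the action of $\sigma$ on an empty local substitution sitting on a context variable never occur in this lemma. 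I do not anticipate a genuine obstacle anywhere; the proof is pure unfolding plus the induction hypothesis, modulo getting the dependency of the second item on the first in the right order.
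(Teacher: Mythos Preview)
Your proposal is correct and matches the paper's approach: the paper handles the first two statements with a one-line remark that they are straightforward because universe substitution does not change context lengths, and then gives exactly your induction on $\Gamma$ for the third statement, with the same three cases and the same use of $\wk_g^k[\sigma] = \wk^k_{\sigma(g)}$ in the variable case. Your additional observations (the dependency of the second item on the first via $\id_\Gamma[\phi] = \id_{\Gamma[\phi]}$, and that the $\cdot_g^{k'}$ sub-cases never arise) are accurate and go slightly beyond what the paper spells out.
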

\begin{proof}
  The first two statements are pretty straightforward as the lengths of the contexts
  are not altered. %
  The last one requires a bit more thought. %
  We proceed by induction on $\Gamma$.
  \begin{itemize}[label=Case]
  \item
    \begin{align*}
      \wk^k_\cdot[\sigma]
      &= \cdot^k[\sigma] \\
      &= \cdot^k
    \end{align*}
  \item
    \begin{align*}
      \wk^k_g[\sigma]
      &= \wk^k_{\sigma(g)}
    \end{align*}
    which already matches the definition of $\wk^k_{g[\sigma]}$.
  \item
    \begin{align*}
      \wk^k_{\Gamma,x: T \at l}[\sigma]
      &= (\wk^{1 + \metalevel}_{\Gamma}, x/x)[\sigma] \\
      &= \wk^{1 + \metalevel}_{\Gamma}[\sigma], x/x \\
      &= \wk^{1 + \metalevel}_{\Gamma[\sigma]}, x/x
        \byIH \\
      &= \wk^{k}_{\Gamma[\sigma], x : T[\sigma] \at l}
    \end{align*}
  \end{itemize}
\end{proof}
\begin{corollary}[Acting on Identities] $ $
  \begin{itemize}
  \item $\id_\Gamma[\phi] = \id_{\Gamma[\phi]}$
  \item $\id_\Psi[\phi] = \id_{\Psi[\phi]}$
  \item $\id_\Gamma[\sigma] = \id_{\Gamma[\sigma]}$
  \end{itemize}
\end{corollary}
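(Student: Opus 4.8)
The corollary is an immediate specialization of Lemma~\ref{lem:dt:wkact}, so the entire plan is to unfold the definitions of the identity substitutions as the $k=0$ instances of the local and global weakening substitutions, and then invoke the three bullets of that lemma verbatim. Concretely, recall that $\id_\Gamma := \wk^0_\Gamma$ and $\id_\Psi := \wk^0_\Psi$ (both for local and global contexts). There is nothing to prove beyond rewriting.

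First I would handle $\id_\Gamma[\phi] = \id_{\Gamma[\phi]}$: by definition this is $\wk^0_\Gamma[\phi]$, which equals $\wk^0_{\Gamma[\phi]}$ by the first bullet of Lemma~\ref{lem:dt:wkact} (with $k = 0$), and that is exactly $\id_{\Gamma[\phi]}$ by definition. Next, $\id_\Psi[\phi] = \wk^0_\Psi[\phi] = \wk^0_{\Psi[\phi]} = \id_{\Psi[\phi]}$, using the second bullet. Finally, $\id_\Gamma[\sigma] = \wk^0_\Gamma[\sigma] = \wk^0_{\Gamma[\sigma]} = \id_{\Gamma[\sigma]}$, using the third bullet. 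In each case the well-formedness/typing side conditions assumed throughout this subsection carry over directly, since $k = 0$ imposes no new constraints.

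There is no real obstacle here: all the work has already been done in Lemma~\ref{lem:dt:wkact}, whose only nontrivial case (acting by a global substitution on $\Gamma, x : T \at l$) was dispatched by an induction on $\Gamma$ that threads the extra $p$-weakening through the cons case. The sole thing to be slightly careful about is the overloading of $\wk$ and $\id$ across local versus global contexts, so I would make sure the three invocations pick out the correctly-typed bullet of the lemma; but this is purely bookkeeping, and the proof can be stated in a single sentence ("Specialize Lemma~\ref{lem:dt:wkact} to $k = 0$ and unfold the definition $\id := \wk^0$.").
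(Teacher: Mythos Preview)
Your proposal is correct and matches the paper's approach exactly: the paper states this as an unproved corollary immediately following Lemma~\ref{lem:dt:wkact}, so the intended argument is precisely to specialize that lemma to $k = 0$ and unfold $\id := \wk^0$.
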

\begin{lemma}[Interactions between Different Substitutions]\labeledit{lem:dt:interact} $ $
  \begin{itemize}
  \item $T[\delta][\phi] = T[\phi][\delta[\phi]]$ ($t$, resp.)
  \item $(\delta \circ \delta') [\phi] = (\delta[\phi]) \circ (\delta'[\phi])$
  \item $T[\sigma][\phi] = T[\phi][\sigma[\phi]]$ ($\Gamma$, $\delta$, $t$, resp.)
  \item $T[\delta][\sigma] = T[\sigma][\delta[\sigma]]$ ($t$, resp.)
  \item $(\delta \circ \delta') [\sigma] = (\delta[\sigma]) \circ (\delta'[\sigma])$
  \end{itemize}
\end{lemma}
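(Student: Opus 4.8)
The plan is to establish all five identities by a connected family of structural inductions, run in three stages fixed by the dependency order. \emph{Stage one} handles the two statements that pair a universe substitution $\phi$ with a local substitution: $T[\delta][\phi]=T[\phi][\delta[\phi]]$ (and its term version) together with $(\delta\circ\delta')[\phi]=(\delta[\phi])\circ(\delta'[\phi])$. These are genuinely mutual --- the cons case of the composition identity reduces to the term identity, while the global-variable case of the term identity, which produces $u^{(\delta''\circ\delta)[\phi]}$ against $u^{\delta''[\phi]\circ\delta[\phi]}$, needs the composition identity. \emph{Stage two} proves $T[\sigma][\phi]=T[\phi][\sigma[\phi]]$ for types, local contexts, local substitutions, and terms (with $\sigma[\phi]$ the componentwise application of $\phi$ to $\sigma$); its $U^{\delta}$ and $u^{\delta}$ cases unfold to $\sigma(U)[\delta[\sigma]][\phi]$, which is rewritten by the \emph{first} identity of stage one (applied to the type $\sigma(U)$ and the local substitution $\delta[\sigma]$) together with the local-substitution instance of the stage-two identity. \emph{Stage three} proves $T[\delta][\sigma]=T[\sigma][\delta[\sigma]]$ for types and terms, mutually with $(\delta\circ\delta')[\sigma]=(\delta[\sigma])\circ(\delta'[\sigma])$, additionally invoking the already-proved Algebra of Local Substitutions lemma ($t[\delta_1][\delta_2]=t[\delta_1\circ\delta_2]$) to collapse the nested local substitutions that appear on the bodies of global variables.

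In every stage the only interesting step is the binder step: pushing a substitution under $\lambda$, $\Pi$, the natural-number recursor, \tletbox, a universe-polymorphic binder, or one of the two recursion principles for code replaces it by its extension --- $\delta$ by $\delta,x/x$; $\sigma$ by $\sigma,g/g$, $\sigma,U^{\id}/U$, or $\sigma,u^{\id}/u$; $\phi$ by $\phi,\vect\ell/\vect\ell$ --- and one checks that this extension commutes with the other substitution. Against $\phi$ this is immediate from the componentwise definitions, using $x[\phi]=x$, $g[\phi]=g$, $U^{\id}[\phi]=U^{\id[\phi]}$; for $\sigma$ acting through a local binder it uses $x[\sigma]=x$, so $(\delta,x/x)[\sigma]=\delta[\sigma],x/x$. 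The subcases carrying $\id_\Gamma$-extensions of a global substitution additionally invoke the corollary on Acting on Identities following \Cref{lem:dt:wkact}, i.e. $\id_\Gamma[\phi]=\id_{\Gamma[\phi]}$ and $\id_\Gamma[\sigma]=\id_{\Gamma[\sigma]}$. The recursion principles for code, with their motive lists $\vect M$ and branch lists $\vect b$, contribute a long list of subcases, but each just replays this same binder-commutation pattern with the variables particular to that branch, so once the pattern is fixed they are purely mechanical.

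The main obstacle is stage three at the base cases of local substitutions, mirroring the difficulty flagged earlier for the Distributivity of Global Substitutions in \Cref{sec:cv}. When $\delta$ (or the first argument of a composition) is $\wk_g^{k}$ or $\cdot_g^{k}$, applying $\sigma$ turns $g$ into $\sigma(g)$ and $\wk_g^{k}[\sigma]=\wk^{k}_{\sigma(g)}$ expands into a $\cdot$-based or $\wk$-based local weakening substitution according to whether $\sigma(g)$ ends with $\cdot$ or with a context variable $g'$; matching this against the composed side forces one to understand how $\wk^{k}_{\Delta}$ composes with an arbitrary local substitution, and how the queries $\widehat{\cdot}$ and $\widecheck{\cdot}$ behave under $\sigma$. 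I would therefore factor out a small lemma describing $\wk^{k}_{\Delta}\circ\delta''$ as again a local weakening substitution over the appropriate base, prove it by induction on $\Delta$, and then close the $\wk_g^{k}$ and $\cdot_g^{k}$ cases of stage three by case analysis on the shape of $\sigma(g)$. All remaining cases --- the local-variable case, the global-variable case (which reduces to the Algebra of Local Substitutions lemma together with the composition identity under construction), and the $\Nat$, $\Pi$, $\Se$ and $\tEl$ cases --- follow directly from the induction hypotheses.
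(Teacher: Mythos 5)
Your proposal is correct and takes essentially the same approach as the paper: the same three-stage decomposition (first two statements mutual, third standalone, last two mutual), the same dependency order, the same identification of the $\wk_g^k$ and $\cdot_g^k$ base cases as the delicate part requiring case analysis on whether $\sigma(g)$ ends in $\cdot$ or a context variable, and the same reliance on \Cref{lem:dt:wkact} and its identity corollary. The small lemma you propose to factor out, characterizing $\wk^k_\Delta \circ \delta''$ as a local weakening substitution, is exactly the fact the paper's $\wk^k_g$ case invokes informally (``$\wk^{|\delta|}_{\sigma(g)}$ projects away all leading terms kept by $\delta$''), so making it explicit is a modest organizational improvement rather than a departure.
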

\begin{proof}
  The first two statements are mutually proved. The last two statements are also
  mutually proved. %

  Most of them can be done by simply following the IHs. %
  We give a few examples.
  \begin{itemize}[label=Case]
  \item
    \begin{align*}
      (\LAM{l}{l'}x{S}t)[\delta][\phi]
      &= \LAM{l}{l'}x{S[\delta][\phi]}{(t[\delta,x/x][\phi])} \\
      &= \LAM{l}{l'}x{S[\phi][\delta[\phi]]}{(t[\phi][(\delta,x/x)[\phi]])}
      \byIH \\
      &= \LAM{l}{l'}x{S[\phi][\delta[\phi]]}{(t[\phi][(\delta[\phi],x/x)])} \\
      &= (\LAM{l}{l'}x{S}t)[\phi][\delta[\phi]]
    \end{align*}
  \item
    \begin{align*}
      (\ell,g,x_T. M_\Typ)[\delta][\phi]
      &= (\ell,g,x_T. M_\Typ[\delta,x_T/x_T][\phi,\ell/\ell]) \\
      &= (\ell,g,x_T. M_\Typ[\phi,\ell/\ell][(\delta,x_T/x_T)[\phi,\ell/\ell]])
      \byIH \\
      &= (\ell,g,x_T. M_\Typ[\phi,\ell/\ell][(\delta[\phi],x_T/x_T)])
      \tag{$\delta[\phi,\ell/\ell] = \delta[\phi]$ due to weakening of universe variables}\\
      &= (\ell,g,x_T. M_\Typ)[\phi][\delta[\phi]]
    \end{align*}
  \item
    \begin{align*}
      (\ell,g,u_t,x_t. t_\tEl)[\sigma][\phi]
      &= (\ell,g,u_t,x_t. t_\tEl[\sigma,g/g,u_t^\id/u_t][\phi,\ell/\ell]) \\
      &= (\ell,g,u_t,x_t. t_\tEl[\phi,\ell/\ell][(\sigma,g/g,u_t^\id/u_t)[\phi,\ell/\ell]])
      \byIH \\
      &= (\ell,g,u_t,x_t. t_\tEl[\phi,\ell/\ell][(\sigma[\phi],g/g,u_t^\id/u_t)])
        \tag{$\sigma$ is universe weakened; \Cref{lem:dt:wkact}} \\
      &= (\ell,g,u_t,x_t. t_\tEl)[\phi][\sigma[\phi]]
    \end{align*}
  \item
    \begin{align*}
      \wk^k_g[\sigma][\phi]
      &= \wk^k_{\sigma(g)}[\phi] \\
      &= \wk^k_{\sigma(g)[\phi]} 
      \tag{by \Cref{lem:dt:wkact}} \\
      &= \wk^k_{g[\sigma[\phi]]}  \\
      &= \wk^k_g[\phi][\sigma[\phi]] 
    \end{align*}
    
  \item
    Then we consider $\cdot^k_g$, and we case analyze $\Gamma := \sigma(g)$:
    \begin{itemize}[label=Subcase]
    \item If $\Gamma$ ends with $\cdot$. 
      \begin{align*}
        \cdot^k_g[\sigma][\phi]
        = \cdot^{|\Gamma| + \metalevel} [\phi] = \cdot^{|\Gamma| + \metalevel}
      \end{align*}
    \item If $\Gamma$ ends with $g'$. 
      \begin{align*}
        \cdot^k_g[\sigma][\phi]
        = \cdot^{|\Gamma| + \metalevel}_{g'} [\phi] = \cdot^{|\Gamma| + \metalevel}_{g'}
      \end{align*}
    \end{itemize}
    
  \item
    \begin{align*}
      (\ell,g,u_t,x_t. t_\tEl)[\delta][\sigma]
      &= (\ell,g,u_t,x_t. t_\tEl[\delta,x_t/x_t][\sigma,u_t^\id/u_t]) \\
      &= (\ell,g,u_t,x_t. t_\tEl[\sigma,u_t^\id/u_t][(\delta,x_t/x_t)[\sigma,u_t^\id/u_t]])
      \byIH \\
      &= (\ell,g,u_t,x_t. t_\tEl[\sigma,u_t^\id/u_t][\delta[\sigma],x_t/x_t])
      \tag{$\delta$ is globally weakened} \\
      &= (\ell,g,u_t,x_t. t_\tEl)[\sigma][\delta[\sigma]]
    \end{align*}
    
  \item Now we consider $\wk^k_g \circ \delta$. %
    This composition basically cancels out all terms from $\delta$ and leave a
    weakening behind. %
    In this case, we know that $\delta$ must end with $\wk^{k'}_g$ for some $g$. %
    Moreover, in order to compose, we have that $|\delta| = \metalevel$. %
    Therefore, 
    \begin{align*}
      (\wk^k_g \circ \delta) [\sigma]
      &= \wk^{\widehat \delta}_g[\sigma] \\
      &= \wk^{k'}_{\sigma(g)}
    \end{align*}
    Moreover,
    \begin{align*}
      (\wk^k_g[\sigma]) \circ (\delta[\sigma])
      &= \wk^{|\delta|}_{\sigma(g)} \circ (\delta[\sigma]) \\
      &= \wk^{k'}_{\sigma(g)}
        \tag{$\wk^{|\delta|}_{\sigma(g)}$ projects away all leading terms kept by
        $\delta$ so only $\wk^{k'}_{g}[\sigma]$ is left}
    \end{align*}
    Therefore two expressions are equal. %
    Similar reasoning holds for the case of $\cdot^k_g \circ \delta$. 
    
  \item Then we consider $\cdot^k$ and case analyze $\widecheck \delta$.
    \begin{itemize}[label=Subcase]
    \item If $\widecheck\delta$ is not a context variable, then $\delta$ must end
      with $\cdot^{k'}$ and also $|\delta| = \metalevel$ due to well-typedness.
      \begin{align*}
        (\cdot^k \circ \delta)[\sigma]
        &= \cdot^{\widehat \delta}[\sigma] \\
        &= \cdot^{k'} \\
        &= \cdot^{k'}[\sigma] \\
        &= (\cdot^k[\sigma]) \circ (\delta[\sigma])
      \end{align*}
    \item If $\widecheck\delta$ is context variable $g$, then $\delta$ must end
      with $\cdot^{k'}_g$ and also $|\delta| = \metalevel$ due to well-typedness.
      \begin{align*}
        (\cdot^k \circ \delta)[\sigma]
        &= \cdot^{\widehat \delta}_g[\sigma] \\
        &= \cdot^{k'}_g[\sigma]
      \end{align*}
      Moreover,
      \begin{align*}
        \cdot^k[\sigma] \circ (\delta[\sigma])
        &= \cdot^k \circ (\delta[\sigma]) \\
        &= \cdot^{\widehat{\delta[\sigma]}}_{\widecheck{\delta[\sigma]}} \\
        &= \cdot^{|\sigma(g)| + k'}_{\widecheck{\delta[\sigma]}}
      \end{align*}
      Then we consider whether $\sigma(g)$ ends with another context variable or
      not.
      \begin{itemize}[label=Subsubcase]
      \item If $\sigma(g)$ ends with $\cdot$, then 
        \begin{align*}
          (\cdot^k \circ \delta)[\sigma]
          &= \cdot^{k'}_g[\sigma] \\
          &= \cdot^{|\sigma(g)| + k'}
        \end{align*}
        and also $\delta[\sigma]$ must also end with no global variable.
      \item If $\sigma(g)$ ends with some $g'$, then 
        \begin{align*}
          (\cdot^k \circ \delta)[\sigma]
          &= \cdot^{k'}_g[\sigma] \\
          &= \cdot^{|\sigma(g)| + k'}_{g'}
        \end{align*}
        Then $\widecheck{\delta[\sigma]}$ must also return $g'$. 
      \end{itemize}
    \end{itemize}
  \end{itemize}
\end{proof}

\begin{lemma}[Universe Substitutions] $ $
  \begin{itemize}
  \item If $\judge[L']\Psi$ and $\typing[L]\phi{L'}$, then $\judge[L]{\Psi[\phi]}$.
  \item If $\lpjudge[\Psi][L']i\Gamma$ and $\typing[L]\phi{L'}$, then $\lpjudge[\Psi[\phi]]i{\Gamma[\phi]}$.
  \item If $\lpequiv[\Psi][L']i\Gamma\Delta$ and $\typing[L]\phi{L'}$, then $\lpequiv[\Psi[\phi]]i{\Gamma[\phi]}{\Delta[\phi]}$.
  \item If $\lttypwf[\Psi][\Gamma][L']i T l$ and $\typing[L]\phi{L'}$, then
    $\lttypwf[\Psi[\phi]][\Gamma[\phi]] i{T[\phi]}{l[\phi]}$.
  \item If $\lttypeq[\Psi][\Gamma][L']i T{T'} l$ and $\typing[L]\phi{L'}$, then
    $\lttypeq[\Psi[\phi]][\Gamma[\phi]] i{T[\phi]}{T'[\phi]} {l[\phi]}$.
  \item If $\lttyping[\Psi][\Gamma][L']i t T l$ and $\typing[L]\phi{L'}$, then
    $\lttyping[\Psi[\phi]][\Gamma[\phi]] i{t[\phi]}{T[\phi]} {l[\phi]}$.
  \item If $\lttyequiv[\Psi][\Gamma][L']i t{t'} T l$ and $\typing[L]\phi{L'}$, then
    $\lttyequiv[\Psi[\phi]][\Gamma[\phi]] i{t[\phi]}{t'[\phi]}{T[\phi]} {l[\phi]}$.
  \item If $\ltsubst[\Psi][\Gamma][L'] i{\delta}{\Gamma'}$ and $\typing[L]\phi{L'}$, then
    $\ltsubst[\Psi[\phi]][\Gamma[\phi]] i{\delta[\phi]}{\Gamma'[\phi]}$.
  \item If $\ltsubeq[\Psi][\Gamma][L'] i{\delta}{\delta'}{\Gamma'}$ and $\typing[L]\phi{L'}$, then
    $\ltsubeq[\Psi[\phi]][\Gamma[\phi]] i{\delta[\phi]}{\delta'[\phi]}{\Gamma'[\phi]}$.
  \end{itemize}  
\end{lemma}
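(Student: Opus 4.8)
The plan is to prove all nine statements by a single simultaneous mutual induction on the derivations, with the syntactic machinery of the two preceding subsections doing essentially all of the work. The two ingredients I will lean on throughout are: first, the level lemmas of \Cref{sec:dt:ulevel}, namely that both $\typing[L]l\Level$ and $\tyequiv[L]l{l'}\Level$ are stable under a universe substitution; and second, the commutation lemmas just established --- the algebra of universe substitutions ($l[\phi][\phi'] = l[\phi\circ\phi']$, $l[\id]=l$, and the analogues for $T$, $\Gamma$, $\delta$, $t$, $\Psi$) and \Cref{lem:dt:interact} ($T[\delta][\phi] = T[\phi][\delta[\phi]]$, $T[\sigma][\phi] = T[\phi][\sigma[\phi]]$, together with the counterparts for $t$, $\Gamma$, $\delta$), plus \Cref{lem:dt:wkact} and its corollary so that $\id_\Gamma[\phi] = \id_{\Gamma[\phi]}$ and $\id_\Psi[\phi] = \id_{\Psi[\phi]}$. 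Because $\typeof i$ and $\compt i$ depend only on the layer index and not on any context, every layer side-condition in a rule transfers verbatim under $[\phi]$.

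Then I would proceed rule by rule. The context-formation and context-equivalence rules ($\judge[L]\Psi$, $\lpjudge i\Gamma$, $\lpequiv i\Gamma\Delta$) simply push $\phi$ into the components, invoke the IH, and use the level lemma for the $\at l$ annotations; the same goes for the $\ltsubst i\delta\Delta$ and $\ltsubeq i\delta{\delta'}\Delta$ rules, where the only point of note is that the premise $\lttyping i t{T[\delta]}l$ is handled by the IH after rewriting $T[\delta][\phi] = T[\phi][\delta[\phi]]$ via \Cref{lem:dt:interact}. For the MLTT type- and term-formation and equivalence rules, the IH dispatches the structural premises, the level lemma dispatches the $\typing[L]l\Level$ and $\tyequiv[L]l{l'}\Level$ premises, and the only subtlety is that a substituted type or term occurring in the conclusion --- $M[t/x]$ and $M[\ze/x]$ and $M[\su x/x]$ in the $\telimn$-rule, $T[s/x]$ in application, the decoding rules, the $\beta$/$\eta$ rules --- must be pushed through $\phi$; each of these is an instance of \Cref{lem:dt:interact} for a one-point local substitution, using that $\phi$ leaves $x/x$ untouched. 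The binders that extend the universe context --- $\UPI\ell l T$, $\ULAM l\ell t$, $\UAPP t l$ --- need slightly more: the substitution operation replaces $\phi$ by its weakening $\phi,\vect\ell/\vect\ell$ under the binder, so to reconcile a conclusion such as $T[\vect l/\vect\ell][\phi]$ with $T[\phi,\vect\ell/\vect\ell][\vect l[\phi]/\vect\ell]$ I would verify $(\vect l/\vect\ell)\circ\phi = (\phi,\vect\ell/\vect\ell)\circ(\vect l[\phi]/\vect\ell)$ directly from the definition of $\circ$ and then appeal to the algebra of universe substitutions (and the identical computation for the level $l$ in the $\UAPP$ case).

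The genuinely laborious part will be the eliminators for contextual types: the two $\tletbox$ forms and, above all, the recursive principles $\ELIMTYP{l_1}{l_2}Mbl\Gamma t$ and $\ELIMTRM{l_1}{l_2}Mbl\Gamma T t$. Their conclusions carry iterated substitutions into the motives --- $M_\Typ[l'/\ell,\Delta/g,t/x_T]$ and $M_\Trm[l'/\ell,\Delta/g,T/U_T,t/x_t]$ --- whereas the substitution operation pushes $\phi$ under the motive binders as $M_\Typ[\phi,\ell/\ell]$ and $M_\Trm[\phi,\ell/\ell]$, so closing the gap requires commuting $[\phi]$ past a mixed universe/global/local closing substitution. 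I would isolate this as a short auxiliary observation --- for a closing substitution $\rho$ of the relevant shape, $M[\rho][\phi] = M[\phi'][\rho[\phi]]$ where $\phi'$ is the appropriately weakened $\phi$ --- proved once and for all from the commutation lemmas above, after which the large premise bundle $G_A$ is discharged componentwise by the IH and the level lemma. The only real obstacle here is bookkeeping: tracking, per branch, which universe variables, context variable $g$, and global variables are bound, and how each motive is re-indexed. Conceptually nothing new occurs, so I would write out one or two representative branches (the $\Pi$ branch of $G_\Typ$ and the $\telimn$ branch of $G_\Trm$) in detail and refer to the mechanization for the rest.
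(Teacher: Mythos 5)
Your proposal matches the paper's proof in both structure and key ingredients: a simultaneous mutual induction where most rules pass through trivially, the universe-binder cases are handled by extending the substitution with the freshly bound universe variables before applying the IH, and the computation rules (especially the recursor $\beta$-rules) are closed using the algebra of universe substitutions together with \Cref{lem:dt:interact} and naturality, illustrated on a representative case. The small auxiliary identity you single out for $\UAPP$ is exactly what the paper calls ``naturality of substitutions,'' so the two proofs are effectively the same.
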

\begin{proof}
  Do a mutual induction. Most rules do not alter the universe context at all so they
  are discharged naturally. %
  When encountering the only changing cases, i.e. universe-polymorphic functions and
  branches of the recursive principles for code, we extend the unvierse substitutions
  with sufficient new universe variables before applying the IHs. %
  When computation rules are encountered, we apply the composition lemma above before
  applying the IHs.

  We consider one complex computational rule in detail to illustrate the proof of the
  lemma:
  \begin{mathpar}
    \inferrule
    {G_A \\ \lpjudge \metalevel \Gamma \\ \typing[L']{l}\Level \\ \typing[L']{l'}\Level \\ \lpjudge[\Psi][L'] \proglevel \Delta \\
      \lttypwf[\Psi][\Delta][L'] \codelevel S l  \\ \lttypwf[\Psi][\Delta, x : S \at l][L']\codelevel{T}{l'} \\
      t = \boxit{\PI{l}{l'}x S T} \\
      s_S = \ELIMTYP{l_1}{l_2}M b{l}\Delta{(\boxit{S})} \\
      s_T = \ELIMTYP{l_1}{l_2}M b{l'}{(\Delta, x : S \at l)}{(\boxit{T})}}
    {\lttyequiv[\Psi][\Gamma][L'] \metalevel {t_\Pi[l/\ell,l'/\ell',\Delta/g,S/U_S,T/U_T,s_S/x_S,
        s_T/x_T]}{\ELIMTYP{l_1}{l_2}M b{(l \sqcup
          l')}\Delta{t}}{M_\Typ[l\sqcup l'/\ell,\Delta/g,t/x_T]}{l_1}}
  \end{mathpar}
  On the right hand side, $\phi$ simply propagates. %
  By IH, we can show that all premises with the universe-substituted sub-terms are
  well-formed. %
  For example, we have
  \[
    \lttypwf[\Psi[\phi]][\Delta[\phi]]\codelevel{S[\phi]}{l[\phi]}
  \]
  and
  \[
    \lttypwf[\Psi[\phi]][\Delta[\phi], x : S[\phi] \at {l[\phi]}]\codelevel{T[\phi]}{l'[\phi]}
  \]
  Now we consider the left hand side.
  \begin{align*}
    & t_\Pi[l/\ell,l'/\ell',\Delta/g,S/U_S,T/U_T,s_S/x_S,
    s_T/x_T][\phi] \\
    =~ & t_\Pi[l/\ell,l'/\ell'][\phi][\Delta[\phi]/g,S[\phi]/U_S,T[\phi]/U_T,s_S[\phi]/x_S,
      s_T[\phi]/x_T]
      \tag{by \Cref{lem:dt:interact}} \\
    =~ & t_\Pi[\phi,\ell/\ell,\ell'/\ell'][l[\phi]/\ell,l'[\phi]/\ell'][\Delta[\phi]/g,S[\phi]/U_S,T[\phi]/U_T,s_S[\phi]/x_S,
      s_T[\phi]/x_T]
      \tag{by naturality of substitutions}
  \end{align*}
  Note that $t_\Pi[\phi,\ell/\ell,\ell'/\ell']$ is precisely how $\phi$ should be
  propagated in $t_\Pi$. %

  Now we consider the type. %
  A similar reasoning applies:
  \begin{align*}
    M_\Typ[l \sqcup l'/\ell, \Delta/g,t/x_T][\phi]
    &= M_\Typ[l \sqcup l'/\ell][\phi][\Delta[\phi]/g,t[\phi]/x_T]
      \tag{by \Cref{lem:dt:interact}} \\
    &= M_\Typ[\phi,\ell/\ell][l[\phi] \sqcup l'[\phi]/\ell][\Delta[\phi]/g,t[\phi]/x_T]
      \tag{by naturality of substitutions}
  \end{align*}
  This equality verifies the resulting type is correct. %
  The same rule ensures the equivalence lives in the universe level $l_1[\phi]$. 
\end{proof}

Then we consider the properties of local substitutions.
\begin{lemma}[Partial Presupposition]\labeledit{lem:dt:presup1} $ $
  \begin{itemize}
  \item If $\lpjudge i \Gamma$, then $\judge[L]\Psi$. 
  \item If $\lpequiv i \Gamma \Delta$, then $\lpjudge i \Gamma$ and $\lpjudge i
    \Delta$. 
  \item If $\ltsubst i \delta \Delta$, then $\lpjudge i \Gamma$. 
  \item If $\ltsubeq i{\delta}{\delta'}\Delta$, then $\ltsubst i \delta \Delta$ and
    $\ltsubst i{\delta'} \Delta$.
  \item If $\ptyping \sigma \Phi$, then $\judge[L]\Psi$ and $\judge[L]\Phi$.
  \end{itemize}
\end{lemma}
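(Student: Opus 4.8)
The plan is to establish all five claims by a single simultaneous induction on the given derivation. Claims 1, 2, 3 and 5 form a self-contained block that never appeals to Claim 4, and each of their cases is pure rule-chasing: the well-formedness facts demanded in the conclusion are either literally among the premises of the rule being analysed or are handed back by the induction hypothesis on an immediate sub-derivation, after which one recloses the goal with the matching formation rule. For Claim 1 the base rules for $\cdot$ and $g$ carry $\judge[L]\Psi$ verbatim, and the rule for $\Gamma, x : T \at l$ has $\lpjudge i\Gamma$ as a premise, so the induction hypothesis delivers $\judge[L]\Psi$. For Claim 3 each base rule for a local substitution carries $\lpjudge{\typeof i}\Gamma$, and the cons rule propagates it via the induction hypothesis on $\ltsubst i\delta\Delta$. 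For Claim 5 one peels off one binding of $\sigma$ at a time: the induction hypothesis on the truncated substitution gives $\judge[L]\Psi$ and $\judge[L]\Phi$, and the extended $\Phi$ is reclosed with the appropriate rule of the global-context well-formedness judgment, whose remaining side conditions --- the level being well-formed, the recorded local context or type being well-formed, and the membership constraint on the layer --- are exactly the premises already present in the $\ptyping$ rule being analysed.

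The only point worth spelling out in this first block is the cons case of Claim 2. From a derivation of $\lpequiv i{\Gamma, x : T \at l}{\Delta, x : T' \at{l'}}$, the induction hypothesis on the sub-derivation $\lpequiv i\Gamma\Delta$ yields $\lpjudge i\Gamma$ and $\lpjudge i\Delta$; the presupposition lemma for universe levels from \Cref{sec:dt:ulevel} turns the premise $\tyequiv[L]l{l'}\Level$ into $\typing[L]l\Level$ and $\typing[L]{l'}\Level$; and the premises $\lttypwf i T l$ and $\lttypwf[\Psi][\Delta] i{T'}{l'}$ are already available. Hence both $\lpjudge i{\Gamma, x : T \at l}$ and $\lpjudge i{\Delta, x : T' \at{l'}}$ can be reassembled.

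The real obstacle is the cons case of Claim 4. Given a derivation of $\ltsubeq i{\delta,t/x}{\delta',t'/x}{\Delta, x : T \at l}$, the induction hypothesis on $\ltsubeq i\delta{\delta'}\Delta$ gives $\ltsubst i\delta\Delta$ and $\ltsubst i{\delta'}\Delta$; together with the premise $\lttyping i t{T[\delta]}l$ this closes $\ltsubst i{\delta,t/x}{\Delta, x : T \at l}$ directly. For $\ltsubst i{\delta',t'/x}{\Delta, x : T \at l}$, however, the cons rule of the typing judgment for local substitutions requires $\lttyping i{t'}{T[\delta']}l$, whereas the equivalence rule only supplies $\lttyping i{t'}{T[\delta]}l$ --- it has to, so that its own premise $\lttyequiv i t{t'}{T[\delta]}l$ is well-formed. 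Bridging this gap needs the conversion $\lttypeq{\typeof i}{T[\delta]}{T[\delta']}l$, which is an instance of the fact that a well-formed type is acted upon equivalently by equivalent local substitutions (\Cref{lem:dt:lsubst-eq}). This is exactly the dependency that the two redundant typing premises in the $\ltsubeq$ cons rule are there to break: they let Claims 1--3 and 5 be proved with no reference to Claim 4, and since the proof of \Cref{lem:dt:lsubst-eq} --- carried out by induction on the type, using the substitution algebra and weakening lemmas already in hand together with Claims 1--3 for the contexts and substitutions it manipulates --- does not itself consume Claim 4, no circularity arises and Claim 4 then closes with a single use of the conversion rule. I expect this interleaving to be the only genuinely delicate aspect; everything else is routine rule-chasing.
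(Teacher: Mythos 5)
You correctly spot the genuine subtlety in the cons case of Claim 4: the $\ltsubeq$ cons rule records both redundant typings at type $T[\delta]$, whereas the $\ltsubst$ cons rule applied to $\delta', t'/x$ demands $\lttyping i {t'}{T[\delta']} l$, so the second redundant premise does not, as literally written, close the $\delta'$ half of Claim 4. This is a real gap that the paper's one-line proof (``Induction on their respective premises\ldots\ The third statement requires the extra premises added to the step case of the equivalence judgment'') glosses over.

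However, your resolution --- deferring the conversion $\lttypeq{\typeof i}{T[\delta]}{T[\delta']}l$ to \Cref{lem:dt:lsubst-eq} and asserting that no circularity arises --- is wrong. \Cref{lem:dt:lsubst-eq} is stated and proved \emph{after} the present lemma, and its proof does consume Claim 4: in the cases the paper elaborates (transitivity, type conversion, the $\beta$/$\eta$ rules), the very first step is to derive $\ltsubst i{\delta_1}{\Gamma'}$ from the hypothesis $\ltsubeq i{\delta_1}{\delta_2}{\Gamma'}$ ``by presupposition'' --- which is exactly Claim 4 --- so that the Local Substitutions lemma can be applied. This use is not incidental or removable: the mutual induction in \Cref{lem:dt:lsubst-eq} (it is not a simple induction on the type, by the way, but a mutual induction over all six judgments) repeatedly has to push $\delta_1$ alone through typing premises, and the Local Substitutions lemma demands $\ltsubst i\delta_1\Gamma'$, which in the paper's formulation can only be extracted from the $\ltsubeq$ hypothesis via Claim 4. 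Your assertion that \Cref{lem:dt:lsubst-eq} ``does not itself consume Claim 4'' is therefore false, and your plan as written is circular.

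The paper's intent is that the redundant premises make Claim 4 go through \emph{on their own}, with no forward reference; as you observe, they do not quite do so for the $\delta'$ side. Two non-circular repairs are available: (a) change the second redundant premise to $\lttyping i {t'}{T[\delta']} l$, after which both halves of Claim 4 close by the same induction with no conversion step at all; or (b) weaken Claim 4 to supply only $\ltsubst i\delta\Delta$, which is in fact the only direction the paper's proof of \Cref{lem:dt:lsubst-eq} ever extracts ``by presupposition'' from an $\ltsubeq$ hypothesis, and recover the $\delta'$ side later once \Cref{lem:dt:lsubst-eq} and symmetry are in hand. Either of these preserves the lemma ordering; an appeal to \Cref{lem:dt:lsubst-eq} inside the proof of Claim 4 does not.
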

\begin{proof}
  Induction on their respective premises. Note that in the second statement, the
  definition of $\lpequiv i \Gamma \Delta$ is adjusted so that a simple induction
  would suffice. %
  The third statement requires the extra premises added to the step case of the equivalence judgment.
\end{proof}
In fact the lemma above has given full presupposition for local contexts and their
equivalence. %
Therefore, in the forthcoming full presupposition lemma, we do not have to state these
cases.

\begin{lemma}[Symmetry and Transitivity of Local Substitutions] $ $
  \begin{itemize}
  \item If $\ltsubeq i{\delta}{\delta'}\Delta$, then $\ltsubeq i{\delta'}{\delta}\Delta$.
  \item If $\ltsubeq i{\delta}{\delta'}\Delta$ and $\ltsubeq
    i{\delta}{\delta''}\Delta$, then $\ltsubeq i{\delta}{\delta''}\Delta$.
  \end{itemize}
\end{lemma}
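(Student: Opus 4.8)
The plan is to prove both parts at once by induction on the derivation of the first hypothesis $\ltsubeq i{\delta}{\delta'}\Delta$, following the pattern used for the simply-typed system in \Cref{sec:cv}. In each of the three base rules the derivation is completely forced by $\Gamma$ (by $|\Gamma|$ together with whether $\Gamma$ ends in $\cdot$ or in a context variable $g$) and the conclusion is reflexive, so $\delta=\delta'$ ($=\delta''$ in the transitive case), the codomain is $\cdot$ or $g$, and both goals follow by re-applying the same rule. All the content is in the cons rule.

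So suppose the last rule is the cons rule: $\delta=(\delta_1,t/x)$, $\delta'=(\delta_1',t'/x)$, $\Delta=(\Delta_1,x:T\at l)$, with premises $\ltsubeq i{\delta_1}{\delta_1'}{\Delta_1}$, $\lttypwf[\Psi][\Delta_1]{\typeof i}T l$, $\lttyping i t{T[\delta_1]}l$, $\lttyping i{t'}{T[\delta_1]}l$ and $\lttyequiv i t{t'}{T[\delta_1]}l$. For symmetry the induction hypothesis gives $\ltsubeq i{\delta_1'}{\delta_1}{\Delta_1}$ and the PER rules for term equivalence give $\lttyequiv i{t'}t{T[\delta_1]}l$. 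To re-form the cons rule with $\delta_1'$ in the first position I must present its term premises at the annotation $T[\delta_1']$ rather than $T[\delta_1]$, so I transport along $\lttypeq{\typeof i}{T[\delta_1]}{T[\delta_1']}l$ using the conversion rule for typing and the conversion rule for term equivalence, obtaining $\lttyping i{t'}{T[\delta_1']}l$, $\lttyping i t{T[\delta_1']}l$ and $\lttyequiv i{t'}t{T[\delta_1']}l$; applying the cons rule then yields $\ltsubeq i{\delta_1',t'/x}{\delta_1,t/x}{\Delta_1,x:T\at l}$. Transitivity is handled the same way: invert both hypotheses, combine the induction hypotheses on the two tails, chain the two term equivalences at a common type annotation (reconciled by the same transport), and re-apply the cons rule.

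The main obstacle is precisely that transport step, which needs the congruence property that $\ltsubeq$-related local substitutions act equivalently on types and on terms: if $\ltsubeq i{\delta_1}{\delta_1'}{\Delta_1}$ and $\lttypwf[\Psi][\Delta_1]{\typeof i}T l$ then $\lttypeq{\typeof i}{T[\delta_1]}{T[\delta_1']}l$ (and analogously for well-typed terms and for composition of local substitutions). This is proved by induction on the well-formedness of $T$: the $U^{\delta_2}$ case reduces, after unfolding $(U^{\delta_2})[\delta_1]=U^{\delta_2\circ\delta_1}$, to the congruence rule for $U^\delta$-as-a-type and to congruence of composition; the $\Pi$-type, the decoding $\Elt l t$, the contextual types, and so on follow by pushing the substitutions under the binders (via the $\delta,x/x$ extension) and appealing to the inductive hypotheses together with the corresponding congruence rules for terms. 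Because this congruence is itself stated in terms of $\ltsubeq$, it cannot be assumed beforehand; it has to be carried in the same mutual block as the present lemma, which is exactly the role of the two redundant well-typedness premises added to the cons rule of $\ltsubeq$ --- ``to provide an early presupposition \dots\ [that] breaks the dependency loop''. Verifying that this bundled induction is well-founded --- and noting that the base cases, whose codomain is $\cdot$ or $g$, carry no $T$ and hence need no transport --- is the delicate part; the rest (symmetry and transitivity of $\lttyequiv$, and commuting substitutions past the conversion rules) is routine.
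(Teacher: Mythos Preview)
Your analysis of the cons case is correct: to re-form the rule with $\delta_1'$ in the first position you must transport the typing and equivalence premises from the annotation $T[\delta_1]$ to $T[\delta_1']$, and that transport needs exactly the congruence property that equivalent local substitutions act equivalently on well-formed types (the type case of \Cref{lem:dt:lsubst-eq}). The paper's proof is literally ``By induction'' and does not address this step at all, so you have identified a genuine gap in the presentation as written.

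Where your proposal overshoots is the claim that this congruence ``has to be carried in the same mutual block'' with the present lemma. Inspecting the proof of \Cref{lem:dt:lsubst-eq}, it never invokes symmetry or transitivity of $\ltsubeq$: it uses only the non-equivalent Local Substitutions lemma, the \emph{left}-sided part of partial presupposition (obtaining $\ltsubst i{\delta_1}{\Gamma'}$ from $\ltsubeq i{\delta_1}{\delta_2}{\Gamma'}$, which is immediate from the redundant typing premises without any transport), and symmetry/transitivity of \emph{term} and \emph{type} equivalence, which are primitive rules. So there is no real circularity. The correct fix is simply to reorder the lemmas so that the Local Substitutions lemma and \Cref{lem:dt:lsubst-eq} are established first, after which the present lemma is indeed a straightforward induction together with a single appeal to that congruence and the conversion rule. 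The redundant premises in the cons rule serve, as the paper says, to make presupposition available \emph{inside} \Cref{lem:dt:lsubst-eq}; they are not there to break a cycle with symmetry/transitivity. Your bundled mutual induction would also work, but it is heavier than what is actually required.
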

\begin{proof}
  By induction.
\end{proof}

\begin{lemma}[Reflexivity] $ $
  \begin{itemize}
  \item If $\lttypwf i T l$, then $\lttypeq i T T l$.
  \item If $\lttyping i t T l$, then $\lttyequiv i t t T l$.
  \item If $\ltsubst i \delta \Delta$, then $\ltsubeq i \delta \delta \Delta$. 
  \item If $\lpjudge i \Gamma$, then $\lpequiv i \Gamma \Gamma$.
  \end{itemize}
\end{lemma}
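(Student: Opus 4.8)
The plan is to prove all four statements by a single simultaneous induction on the derivations of $\lttypwf i T l$, $\lttyping i t T l$, $\ltsubst i\delta\Delta$ and $\lpjudge i\Gamma$, case-splitting on the last rule and answering each typing or well-formedness rule with the congruence rule of the same shape for the matching equivalence judgment. This goes through because the equivalence rules of \delamlang were designed to mirror the typing rules one for one: whenever a typing rule has a premise that is itself one of these four judgments, the induction hypothesis supplies exactly the reflexivity instance that the corresponding congruence rule demands. The only other premises occurring in the typing rules are well-formedness and equivalence of universe levels and the side conditions $\lttypeq{\typeof i}T{T'}l$ and $\tyequiv[L]l{l'}\Level$ of the conversion rules; the level side conditions are discharged by the reflexivity rule $\typing[L]l\Level\Rightarrow\tyequiv[L]l l\Level$, and the type-equivalence side conditions are reused verbatim in the corresponding equivalence conversion rule.

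Concretely, for $\lttypwf i T l$ I would run over every type former ($\Nat$, $\PI{l}{l'}xST$, $\Ty l$, $\Elt l t$, $\UPI\ell l T$, $U^\delta$, $\CPI g l T$, $\TPI U l{l'}T$, $\CTyp[\Delta]l$, $\CTrm[\Delta]T l$), apply the IH to each structural subderivation, insert level reflexivity wherever a level occurs, and conclude by the homonymous congruence rule instantiated with both sides equal; the two universe-level conversion rules are handled by the IH followed by the equivalence conversion rule, reusing the level premise. The case $\lttyping i t T l$ has the same structure over all introduction, elimination, $\tletbox$ and recursive-principle rules, using the congruence rules collected in the preceding subsections, with its two coercion rules (change of type along a type equivalence, change of level along a level equivalence) dispatched by the IH plus the matching equivalence coercion rule. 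For $\ltsubst i\delta\Delta$ the base cases $\cdot^k$, $\cdot_g^k$, $\wk_g^k$ follow from the identically-premised reflexive equivalence rules, and the cons case $\delta,t/x$ combines the IH $\ltsubeq i\delta\delta\Delta$, the IH-supplied $\lttyequiv i t t{T[\delta]}l$, and the carried premises $\lttypwf[\Psi][\Delta]{\typeof i}T l$ and $\lttyping i t{T[\delta]}l$ --- the latter exactly filling the two redundant typing premises that were deliberately inserted into the step rule of $\ltsubeq$ for precisely this purpose. Finally $\lpjudge i\Gamma\Rightarrow\lpequiv i\Gamma\Gamma$ is obtained by instantiating the context-extension congruence rule with both sides equal, feeding it the IH on the tail, the induction-hypothesis type equivalence for the head type (used in both the ``$\Gamma$'' and ``$\Delta$'' slots of the rule), the carried well-formedness premise for the head type, and $\tyequiv[L]l l\Level$.

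The induction is well-founded: every premise of a rule for one of the four positive judgments is either again one of those four (hence an IH) or an equivalence/level/presupposition side condition used unchanged, so there is no vicious dependency on the equivalence judgments --- the conversion rules mention $\lttypeq{\typeof i}T{T'}l$ only as a premise and never require an IH for it. I do not expect any conceptual difficulty; the main effort is bookkeeping: writing out the numerous branch cases of the recursive-principle congruence rules (which the text only described as letting equivalence ``propagate inwards''), and checking that in the more liberal congruence rules --- those allowing the two sides to carry equivalent-but-syntactically-distinct universe levels --- instantiating with equal data still yields exactly the universe level recorded by the corresponding typing rule. These checks are routine and parallel the reflexivity arguments already carried out in \Cref{sec:cv,sec:cv:logrel}.
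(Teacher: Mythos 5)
Your proof is correct and is the standard mutual structural induction that must underlie the paper's one-sentence proof. The paper merely claims the first two items are proved ``by symmetry and then transitivity,'' but read literally that is not a self-contained argument: the PER rules cannot produce $T \approx T$ from the mere well-formedness of $T$ without first furnishing \emph{some} equivalence involving $T$, and producing one requires precisely the simultaneous induction you carry out --- each typing or well-formedness rule answered by the homonymous congruence rule instantiated diagonally, universe-level side conditions closed by level reflexivity, and the two conversion rules closed by the induction hypothesis followed by the matching equivalence conversion rule. Your remaining observations --- that the redundant typing premises deliberately inserted into the step case of the local-substitution equivalence judgment are exactly what the diagonal instantiation needs, and that the third and fourth items are then step-case corollaries of the second and first --- agree with the paper's own dependency structure, so I see no gap.
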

\begin{proof}
  The first two statements are proved by symmetry and then transitivity. %
  The third (fourth) statement is a natural consequence of the second (first, resp.) statement. 
\end{proof}

\begin{lemma}[Local Substitutions]$ $
  \begin{itemize}
  \item If $\lttypwf[\Psi][\Gamma']i T l$ and $\ltsubst i{\delta}{\Gamma'}$, then
    $\lttypwf i{T[\delta]} l$.
  \item If $\lttypeq[\Psi][\Gamma']i T{T'} l$ and $\ltsubst i{\delta}{\Gamma'}$, then
    $\lttypeq i{T[\delta]}{T'[\delta]} l$.
  \item If $\lttyping[\Psi][\Gamma']i t T l$ and $\ltsubst i{\delta}{\Gamma'}$, then
    $\lttyping i{t[\delta]}{T[\delta]} l$.
  \item If $\lttyequiv[\Psi][\Gamma']i t{t'} T l$ and $\ltsubst i{\delta}{\Gamma'}$, then
    $\lttyequiv i{t[\delta]}{t'[\delta]}{T[\delta]} l$.
  \item If $\ltsubst[\Psi][\Gamma'] i{\delta'}{\Gamma''}$ and $\ltsubst i{\delta}{\Gamma'}$, then
    $\ltsubst i{\delta' \circ \delta}{\Gamma''}$.
  \item If $\ltsubeq[\Psi][\Gamma'] i{\delta'}{\delta''}{\Gamma''}$ and $\ltsubst i{\delta}{\Gamma'}$, then
    $\ltsubeq i{\delta' \circ \delta}{\delta'' \circ \delta}{\Gamma''}$.
  \end{itemize}
\end{lemma}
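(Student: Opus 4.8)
The plan is to prove all six statements simultaneously by a single mutual induction on the derivations of $\lttypwf[\Psi][\Gamma']i T l$, $\lttypeq[\Psi][\Gamma']i T{T'} l$, $\lttyping[\Psi][\Gamma']i t T l$, $\lttyequiv[\Psi][\Gamma']i t{t'} T l$, $\ltsubst[\Psi][\Gamma'] i{\delta'}{\Gamma''}$, and $\ltsubeq[\Psi][\Gamma'] i{\delta'}{\delta''}{\Gamma''}$, holding $\delta$ fixed. The induction has to be genuinely mutual: the cases for $U^{\delta'}$, $u^{\delta'}$, and every neutral term carrying a stored local substitution reduce, via the composition law $t[\delta'][\delta] = t[\delta' \circ \delta]$ from the algebra-of-local-substitutions lemma established above, to statement five applied to the contained sub-derivation $\ltsubst i{\delta'}{\Delta}$; this remains well-founded since statement five is then invoked on a strictly smaller derivation. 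The global-context extensions appearing in the modal rules (e.g. $\Psi, U : \DTyp i l$) are untouched, because $\delta$ acts only on the local context.

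Before the main case analysis I would isolate two routine ingredients. First, an extension lemma: if $\ltsubst i \delta {\Gamma'}$ and $\lttypwf[\Psi][\Gamma']{\typeof i} T l$, then $\ltsubst i {(\delta, x/x)}{\Gamma', x : T[\delta] \at l}$ in the ambient context $\Gamma$ extended by $x : T[\delta] \at l$; this follows from the local-weakening stability of substitution typing (to re-type $\delta$ over the extended codomain) and the variable rule. Second, a single-point commutation corollary of the composition law, $M[t/x][\delta] = M[\delta, t[\delta]/x]$ and $M[(\delta,x/x)][\id, t/x] = M[\delta, t/x]$, used to reconcile the output type of a constructor after substitution. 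With these in hand, the $\PI l{l'}xST$, $\LAM l{l'}xSt$, $\APP t l{l'}xSTs$, $\ELIMN l{x.M}s{x,y.s'}t$ cases and their congruences all run uniformly: apply the IH to each sub-derivation in the extended context using $(\delta, x/x)$ (iterated once more for the step branch of the recursor), then rewrite the stated type via the commutation corollary. All universe-level side conditions are unaffected since $\delta$ leaves $L$ alone.

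For the modal fragment, the constructors that freeze their body, $\boxit T$ and $\boxit t$, are immediate since $\boxit{T}[\delta] = \boxit T$ and $\boxit{t}[\delta] = \boxit t$: the conclusion is the premise with an unchanged subject, so one only checks that $\Gamma[\delta]$ is well-formed, which is presupposed by statement five. The meta-function cases $\CLAM l g t$, $\CAPP t \Delta$, $\TLAM l{l'}Ut$, $\TAPP tT$, together with $\ULAM l\ell t$ and $\UAPP tl$, push $\delta$ into the body while commuting past the global/universe binder, then close by the IH and the commutation corollary for the instantiated output type. The two $\tletbox$ forms and the two recursive principles for code propagate $\delta$ into the motives and branches, extending it only by the local-context binders $x_T, x_t, x_S, \dots$ occurring there, so each of the (numerous but schematic) branches is discharged by the IH with the appropriately iterated $(\delta, x/x)$; I would spell out one representative branch and remark that the rest follow the identical pattern. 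The conversion rules are handled by applying the IH for type equivalence at layer $\typeof i$, for which we need $\ltsubst{\typeof i}\delta{\Gamma'}$ obtained from $\ltsubst i\delta{\Gamma'}$ by the lifting lemma for substitution typing (using $\typeof{\typeof i} = \typeof i$), together with the IH for term typing at layer $i$.

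The main obstacle I expect is bookkeeping rather than insight: threading statement five through all the neutral-with-stored-substitution cases without breaking well-foundedness, and correctly tracking the $\typeof i$ layer shift in the conversion rules — the subject $\delta$ is a layer-$i$ substitution but the conversion premises it must act on live at layer $\typeof i$, so one must know that a layer-$i$ substitution lifts to layer $\typeof i$ (it does, and in particular a layer-$\varlevel$ substitution consists solely of variables and lifts trivially). Secondarily, the base cases $\cdot^k$, $\cdot_g^k$, $\wk_g^k$ of statements five and six force the small computations with $\widehat{\delta}$ and $\widecheck{\delta}$, including the sub-case split on whether $\widecheck{\delta}$ is a context variable, exactly as in \Cref{sec:cv}.
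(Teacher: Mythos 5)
Your proposal is correct and follows essentially the same route as the paper's proof: a single mutual induction on all six statements with $\delta$ held fixed, using statement five for the stored-substitution cases, the lifting lemma plus idempotence of $\ttypeof$ for the conversion rules, and the base-case computations with $\widehat{\delta}$ and $\widecheck{\delta}$.

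The one place where you underestimate the work is the family of commutation equations. The commutation corollary you isolate only covers two local substitutions acting in sequence. The recursor $\beta$-rules, which the paper works out in detail, require pushing $\delta$ through a conclusion substitution of the shape $t_\lambda[l/\ell,\,l'/\ell',\,\Delta/g,\,S/U_S,\,T/U_T,\,t/u_t,\,s_S/x_S,\,s_t/x_t]$, i.e.\ simultaneously through universe, global, \emph{and} local substitutions; this needs the interaction lemma for mixed substitutions (\Cref{lem:dt:interact}) and naturality of local substitutions, not just the local-local corollary you state. The phrase ``commuting past the global/universe binder'' acknowledges the issue for the meta-function cases, but the recursor branches genuinely need the full toolkit. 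Similarly, the paper flags that the local-variable \emph{congruence} case needs the reflexivity lemma (to turn $\lttyping i{\delta(x)}{T[\delta]}l$ into $\lttyequiv i{\delta(x)}{\delta(x)}{T[\delta]}l$, since the congruence rule for $x$ is a leaf with no usable IH); you derive the typing via the variable rule but do not mention the reflexivity step. Neither is a structural flaw — both are precisely the ``bookkeeping rather than insight'' you predict, and the paper is itself quite terse, spelling out only the $\TAPP$ congruence and the $t_\lambda$ $\beta$-rule as representative hard cases. Your proposal is correct; it is just slightly optimistic about how much equational machinery the recursor $\beta$-rules require beyond the single commutation corollary you isolate.
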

\begin{proof}
  Do a mutual induction. %
  Many cases go through naturally if their premises do not alter the local context. %
  In the base cases, we use partial presupposition above to obtain $\lpjudge i
  \Gamma$ and $\judge[L]\Psi$. %
  The global variable cases depend on the composition of local substitutions. %
  The local variable case depends on the reflexivity lemma. 
  
  We consider a few cases:
  \begin{itemize}[label=Case]
  \item
    \begin{mathpar}
      \inferrule
      {\lttyequiv[\Psi][\Gamma'] \metalevel{t}{t'}{\TPI U[\Delta] l{l'}{T''}}{l'} \\ \lttypeq[\Psi][\Delta] \proglevel{T}{T'}l}
      {\lttyequiv[\Psi][\Gamma'] \metalevel{\TAPP t{T}}{\TAPP{t'}{T'}}{T''[T/U]}{l'}}
    \end{mathpar}
    \begin{align*}
      & \lttyequiv \metalevel{t[\delta]}{t'[\delta]}{\TPI U[\Delta] l{l'}{(T''[\delta])}}{l'}
        \byIH \\
      & \lttyequiv \metalevel{\TAPP{t[\delta]}{T}}{\TAPP{t'[\delta]}{T'}}{T''[\delta][T/U]}{l'}
        \tag{by the same congruence rule} 
    \end{align*}
    Notice that
    \[
      T''[\delta][T/U] = T''[T/U][\delta[T/U]] = T''[T/U][\delta]
    \]
    where the second equation holds because $\delta$ is not typed in a context with
    $U$.
    
  \item
    \begin{mathpar}
      \inferrule
      {G_A \\ \lpjudge \metalevel {\Gamma'} \\ \lpjudge \proglevel \Delta \\ \typing[L]l\Level  \\
        \typing[L]{l'}\Level \\ \lttypwf[\Psi][\Delta] \codelevel S l \\
        \lttyping[\Psi][\Delta, x : S \at l] \codelevel t{T}{l'} \\
        l_\Pi = l \sqcup l' \\
        T_\Pi = \PI{l}{l'}x{S}{T} \\
        t' = \boxit{\LAM{l}{l'}x{S}{t}} \\
        s_S = \ELIMTYP{l_1}{l_2}M b{l}\Delta{(\boxit{S})}[\delta] \\
        s_t = \ELIMTRM{l_1}{l_2}M b{l'}{(\Delta, x : S \at l)}{T}{(\boxit{t})}[\delta] \\
        \delta' = s_S/x_S,s_t/x_t}
      {\lttyequiv[\Psi][\Gamma'] \metalevel{t_\lambda[l/\ell,l'/\ell',\Delta/g,S/U_S,T/U_T,t/u_t,\delta']}{\ELIMTRM{l_1}{l_2}M b{l_\Pi}\Delta{T_\Pi}{t'}}{M_\Trm[l_\Pi/\ell,\Delta/g,T_\Pi/U_T,t'/x_t]}{l_2}}
    \end{mathpar}
    In this case, we first apply IHs so that $\delta$ is propagated into all premises
    in $G_A$ and we must reason about the left hand side and the result type.
    \begin{align*}
      & t_\lambda[\delta,x_S/x_S,x_t/x_t][l/\ell,l'/\ell',\Delta/g,S/U_S,T/U_T,t/u_t,\delta'] \\
      =~& t_\lambda[l/\ell,l'/\ell',\Delta/g,S/U_S,T/U_T,t/u_t][\delta[l/\ell,l'/\ell',\Delta/g,S/U_S,T/U_T,t/u_t],x_S/x_S,x_t/x_t][\delta']
          \tag{by \Cref{lem:dt:interact}} \\
      =~&t_\lambda[l/\ell,l'/\ell',\Delta/g,S/U_S,T/U_T,t/u_t][\delta,x_S/x_S,x_t/x_t][\delta']
          \tag{$\delta$ has no those variables} \\
      =~&t_\lambda[l/\ell,l'/\ell',\Delta/g,S/U_S,T/U_T,t/u_t][s_S/x_S,s_t/x_t][\delta]
          \tag{naturality of local substitutions}
    \end{align*}
    On the return type, we have
    \begin{align*}
      & M_\Trm[\delta,x_t/x_t][l_\Pi/\ell,\Delta/g,T_\Pi/U_T,t'/x_t] \\
      =~& M_\Trm[l_\Pi/\ell,\Delta/g,T_\Pi/U_T][\delta,x_t/x_t][t'/x_t]
          \tag{by \Cref{lem:dt:interact} similarly} \\
      =~& M_\Trm[l_\Pi/\ell,\Delta/g,T_\Pi/U_T][t'/x_t][\delta]
          \tag{naturality of local substitutions}
    \end{align*}
    Both equations conclude this case. 
  \end{itemize}
\end{proof}

We also need a similar lemma about equivalent local substitutions.
\begin{lemma}[Equivalent Local Substitutions]\labeledit{lem:dt:lsubst-eq}$ $
  \begin{itemize}
  \item If $\lttypwf[\Psi][\Gamma']i T l$ and $\ltsubeq i{\delta_1}{\delta_2}{\Gamma'}$, then
    $\lttypeq i{T[\delta_1]}{T[\delta_2]} l$.
  \item If $\lttypeq[\Psi][\Gamma']i T{T'} l$ and $\ltsubeq i{\delta_1}{\delta_2}{\Gamma'}$, then
    $\lttypeq i{T[\delta_1]}{T'[\delta_2]} l$.
  \item If $\lttyping[\Psi][\Gamma']i t T l$ and $\ltsubeq i{\delta_1}{\delta_2}{\Gamma'}$, then
    $\lttyequiv i{t[\delta_1]}{t[\delta_2]}{T[\delta_1]} l$.
  \item If $\lttyequiv[\Psi][\Gamma']i t{t'} T l$ and $\ltsubeq i{\delta_1}{\delta_2}{\Gamma'}$, then
    $\lttyequiv i{t[\delta_1]}{t'[\delta_2]}{T[\delta_1]} l$.
  \item If $\ltsubst[\Psi][\Gamma'] i{\delta}{\Gamma''}$ and $\ltsubeq i{\delta_1}{\delta_2}{\Gamma'}$, then
    $\ltsubeq i{\delta \circ \delta_1}{\delta \circ \delta_2}{\Gamma''}$.
  \item If $\ltsubeq[\Psi][\Gamma'] i{\delta}{\delta'}{\Gamma''}$ and $\ltsubeq i{\delta_1}{\delta_2}{\Gamma'}$, then
    $\ltsubeq i{\delta \circ \delta_1}{\delta' \circ \delta_2}{\Gamma''}$.
  \end{itemize}
\end{lemma}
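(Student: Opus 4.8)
The plan is to prove all six statements simultaneously by mutual induction on the derivation of the first hypothesis in each clause (the well-formedness, typing, or equivalence judgment over $\Gamma'$), in close parallel to the preceding ``Local Substitutions'' lemma, which I will freely use since it is already established. For every congruence rule the strategy is uniform: push $\delta_1$ through the ``left'' premises and $\delta_2$ through the ``right'' premises using the induction hypotheses, then reassemble with the corresponding congruence rule; the index types occurring under binders and in the conclusion are realigned with \Cref{lem:dt:interact} (the interaction lemmas), naturality of local substitutions, the algebra of local substitutions, and — for the two conversion rules and the universe-level conversion rule — the type-level clause of this very lemma together with the universe-substitution lemmas. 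The conversion rule $\lttyping i t T l$ from $\lttyping i t{T'}l$, $\lttypeq{\typeof i}T{T'}l$ is handled by applying the IH to both premises and noting that the resulting equivalence still lives at type $T[\delta_1]$ because, by the type-level clause applied to the well-formedness of $T'$, we have $\lttypeq i{T[\delta_1]}{T'[\delta_1]}{l}$ and may switch indices via the conversion rule for equivalence.

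\textbf{Base cases.} The genuine content sits in the leaves. For a local variable $x : T \at l \in \Gamma'$ we have $x[\delta_j] = \delta_j(x)$, and a short auxiliary induction on the suffix of $\Gamma'$ past $x$ extracts from $\ltsubeq i{\delta_1}{\delta_2}{\Gamma'}$ precisely $\lttyequiv i{\delta_1(x)}{\delta_2(x)}{T[\delta_1']}{l}$ for the relevant prefix $\delta_1'$ of $\delta_1$; one further rewrite with the type-level clause aligns the index with $T[\delta_1]$. For a global variable $u^\delta$ (and, dually, $U^\delta$) we have $u^\delta[\delta_j] = u^{\delta \circ \delta_j}$, so we invoke the composition clause (part 5, with $\delta$ fixed and $\delta_1 \approx \delta_2$) to obtain $\ltsubeq i{\delta \circ \delta_1}{\delta \circ \delta_2}{\Delta}$ and close with the global-variable congruence rule. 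The constant local substitutions $\cdot^k$, $\cdot^k_g$, $\wk^k_g$ are immediate once we observe that $\ltsubeq i\delta{\delta'}\Delta$ forces $\widehat\delta = \widehat{\delta'}$ and $\widecheck\delta = \widecheck{\delta'}$, so both composites coincide syntactically after the same case split on $\sigma(g)$ used in the previous lemma. The cons case of parts 5 and 6 follows from the IHs plus the term-equivalence clause (part 4) applied to the premise $\lttyequiv i t{t'}{T[\delta]}l$, with the type annotation matched via $T[\delta][\delta_1] = T[\delta \circ \delta_1]$; here the two redundant well-typedness premises deliberately added to the step rule of $\ltsubeq$ are exactly what lets the IH for part 4 fire, and the ``Local Substitutions'' lemma supplies the corresponding typing premises needed to rebuild the $\ltsubeq$ step rule.

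\textbf{Computation rules.} For the $\beta$ and $\eta$ rules — both the MLTT ones (natural numbers, dependent functions, universe-polymorphic functions) and the modal ones ($\tletbox$ for types and terms, and the recursive principles for code) — the IH cannot be applied to the rule's conclusion, so we proceed as in the ``Local Substitutions'' lemma: apply the IHs to all side premises, then compute both sides of the contraction with \Cref{lem:dt:interact} and naturality of substitutions, exhibiting $t_{\text{redex}}[\delta_1]$ and $t_{\text{contractum}}[\delta_2]$ as connected by the same computation rule up to the rewrites already in hand, and likewise for the result type and its universe level.

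\textbf{Main obstacle.} I expect the principal difficulty to be keeping the mutual induction well-founded given the extra premises carried by the step rule of $\ltsubeq$: those premises ($\lttyping i t{T[\delta]}l$, $\lttyping i{t'}{T[\delta]}l$, $\lttyequiv i t{t'}{T[\delta]}l$) are precisely the ``early presupposition'' that this lemma relies on, and one must check that the appeals to the type-level clause (to reconcile $T[\delta_1]$ with $T[\delta_2]$) and to the prior ``Local Substitutions'' lemma do not reintroduce the dependency loop — they do not, since the former is applied to structurally smaller derivations within the same mutual block and the latter is genuinely prior. The remaining cost is the sheer volume of the recursive-principle cases, with their lists of motives, branch variables, and extended local and global contexts, but once the pattern from the $\Pi$, $\lambda$, and $\tapp$ cases is fixed these are mechanical, and I would defer their full expansion to the Agda mechanization.
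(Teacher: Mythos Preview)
Your overall plan (mutual induction on the first derivation, congruence cases by pushing $\delta_1$ into the left premises and $\delta_2$ into the right premises via IH, base cases via the early presupposition built into $\ltsubeq$) matches the paper. The gap is in your handling of the computation rules and of the PER rules (symmetry, transitivity), which you do not single out.

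The description ``proceed as in the Local Substitutions lemma: apply the IHs to all side premises, then \ldots\ exhibit $t_{\text{redex}}[\delta_1]$ and $t_{\text{contractum}}[\delta_2]$ as connected by the same computation rule'' does not work. The side premises of a $\beta$ rule are \emph{typing} judgments; the IHs of this lemma turn typing judgments into \emph{equivalences}, so after applying them you no longer have anything you can feed back into the $\beta$ rule. What the paper does instead is: extract $\ltsubst i{\delta_1}{\Gamma'}$ from $\delta_1 \approx \delta_2$ by partial presupposition; apply the \emph{previous} Local Substitutions lemma to the entire conclusion with $\delta_1$, obtaining e.g.\ $t[T/U][\delta_1] \approx \TAPP{(\TLAM{l}{l'}U{(t[\delta_1])})}{T}$ at type $T'[T/U][\delta_1]$; then use the IH on the sub-term $t$ to get $t[\delta_1] \approx t[\delta_2]$, rebuild the right-hand side via the congruence rule for the eliminator, and close by transitivity. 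The same pattern handles syntactic transitivity (first premise via the previous lemma with $\delta_1$, second premise via IH with $\delta_1 \approx \delta_2$) and the $\eta$ rules. The paper's remark immediately after the proof stresses that this only goes through because every computation rule is stated with the contractum on the \emph{left}: with the opposite orientation one is forced to reconcile $T'[\delta_1]$ and $T'[\delta_2]$ in the result type, which needs the global substitution lemma --- not yet available at this point. Your write-up places the redex on the left and does not surface this dependency.

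A smaller point: in your conversion-rule sketch you claim $T[\delta_1] \approx T'[\delta_1]$ ``by the type-level clause applied to the well-formedness of $T'$''. But the premise is the equivalence $\lttypeq{\typeof i} T{T'} l$, not a well-formedness judgment, and full presupposition is not yet available to produce one. The paper obtains $T[\delta_1] \approx T'[\delta_1]$ directly from the previous Local Substitutions lemma applied to $T \approx T'$ with the well-typed $\delta_1$, and then uses IH on the typing premise followed by the conversion rule for equivalence.
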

\begin{proof}
  We proceed by mutual induction. %
  When we encounter well-formedness of types and well-typedness of terms, we conclude
  by IHs and respective congruence rules. %
  What is more difficult are the asymmetric equivalence rules. %
  We must apply IHs a bit more carefully to obtain the conclusions. %
  We elaborate on a few cases.
  \begin{itemize}[label=Case]
  \item
    \begin{mathpar}
      \inferrule
      {\lttyequiv[\Psi][\Gamma'] i t{t'}T l \\ \lttyequiv[\Psi][\Gamma'] i{t'}{t''}T l}
      {\lttyequiv[\Psi][\Gamma'] i t{t''}T l}
    \end{mathpar}
    \begin{align*}
      & \ltsubst i{\delta_1}{\Gamma'}
        \tag{by presupposition} \\
      & \lttyequiv i {t[\delta_1]}{t'[\delta_1]}{T[\delta_1]} l
        \tag{by local substitution lemma} \\
      & \lttyequiv i{t'[\delta_1]}{t''[\delta_2]}{T[\delta_1]} l
        \byIH \\
      & \lttyequiv i {t[\delta_1]}{t''[\delta_2]}{T[\delta_1]} l
        \tag{by transitivity}
    \end{align*}
  \item
    \begin{mathpar}
      \inferrule
      {\lttyping[\Psi][\Gamma'] i t{T'}l \\ \lttypeq[\Psi][\Gamma']{\typeof i}T{T'}l}
      {\lttyping[\Psi][\Gamma'] i t T l}
    \end{mathpar}
    \begin{align*}
      & \ltsubst i{\delta_1}{\Gamma'}
        \tag{by presupposition} \\
      & \lttypeq{\typeof i}{T[\delta_1]}{T'[\delta_1]}l
        \tag{by local substitution lemma} \\
      & \lttyequiv i {t[\delta_1]}{t[\delta_2]}{T'[\delta_1]}l
        \byIH \\
      & \lttyequiv i {t[\delta_1]}{t[\delta_2]}{T[\delta_1]}l
        \tag{by conversion}
    \end{align*}
    
  \item
    \begin{mathpar}
      \inferrule
      {\lpjudge \metalevel{\Gamma'} \\ \lttyping[\Psi, U : \DTyp[\Delta] \proglevel l][\Gamma']\metalevel{t}{T'}{l'} \\ \typing[L]{l}\Level \\ \typing[L]{l'}\Level \\ \lttypwf[\Psi][\Delta] \proglevel{T} l}
      {\lttyequiv[\Psi][\Gamma'] \metalevel{t[T/U]}{\TAPP{(\TLAM{l}{l'}U t)}{T}}{T'[T/U]}{l'}}
    \end{mathpar}
    \begin{align*}
      & \ltsubst \metalevel{\delta_1}{\Gamma'}
        \tag{by presupposition} \\
      & \lttyequiv \metalevel{t[T/U][\delta_1]}{\TAPP{(\TLAM{l}{l'}U
        {t[\delta_1]})}{T}}{T'[T/U][\delta_1]}{l'}
        \tag{by local substitution lemma} \\
      & \lttyequiv[\Psi, U : \DTyp[\Delta] \proglevel
        l]\metalevel{t[\delta_1]}{t[\delta_2]}{T'[\delta_1]}{l'}
        \byIH \\
      & \lttyequiv \metalevel{\TAPP{(\TLAM{l}{l'}U {t[\delta_1]})}{T}}{\TAPP{(\TLAM{l}{l'}U
        {t[\delta_2]})}{T}}{T'[T/U][\delta_1]}{l'}
        \tag{by congruence; note that $T'[T/U][\delta_1] = T'[\delta_1][T/U]$ as
        $U$ is not in $\delta_1$} \\
      & \lttyequiv \metalevel{t[T/U][\delta_1]}{\TAPP{(\TLAM{l}{l'}U
        {t[\delta_2]})}{T}}{T'[T/U][\delta_1]}{l'}
        \tag{by transitivity} 
    \end{align*}
    
  \item
    \begin{mathpar}
      \inferrule
      {\lpjudge \metalevel{\Gamma'} \\ \typing[L]{l'}\Level \\ \typing[L]l\Level \\ \lpjudge \proglevel \Delta \\
        \lttypwf[\Psi][\Delta]\codelevel T l \\
        \lttypwf[\Psi][\Gamma',x_T : \CTyp[\Delta]l \at{0}]\metalevel{M}{l'} \\
        \lttyping[\Psi, U : \DTyp[\Delta]\codelevel l][\Gamma']\metalevel{t'}{M[\boxit U^\id/x_T]}{l'}}
      {\lttyequiv[\Psi][\Gamma'] \metalevel{t'[T/U]}{\LETBTYP{l'}l \Delta{x_T.M}{U}{t'}{(\boxit T)}}{M[\boxit T/x_T]}{l'}}
    \end{mathpar}
    \begin{align*}
      & \ltsubst \metalevel{\delta_1}{\Gamma'}
        \tag{by presupposition} \\
      & \lpjudge \metalevel{\Gamma}
        \tag{by presupposition} \\        
      & \lttyequiv \metalevel{t'[T/U][\delta_1]}{\LETBTYP{l'}l
        \Delta{x_T.M[\delta_1,x_T/x_T]}{U}{t'[\delta_1]}{(\boxit T)}}{M[\boxit
        T/x_T][\delta_1]}{l'}
        \tag{by local substitution lemma} \\
      & \lttypeq[\Psi][\Gamma,x_T : \CTyp[\Delta]l \at{0}]\metalevel{M[\delta_1,x_T/x_T]}{M[\delta_2,x_T/x_T]}{l'}
        \byIH \\
      & \lttyequiv[\Psi, U : \DTyp[\Delta]\codelevel
        l]\metalevel{t'[\delta_1]}{t'[\delta_2]}{M[\delta_1,x_T/x_T][\boxit U^\id/x_T]}{l'}
        \byIH \\
      & \quad {\LETBTYP{l'}l
        \Delta{x_T.M[\delta_1,x_T/x_T]}{U}{t'[\delta_1]}{(\boxit T)}} \\
      & \approx {\LETBTYP{l'}l
        \Delta{x_T.M[\delta_2,x_T/x_T]}{U}{t'[\delta_2]}{(\boxit T)}}:
        {M[\delta_1,x_T/x_T][\boxit T/x_T]}
        \tag{by congruence} \\
      & \lttyequiv \metalevel{t'[T/U][\delta_1]}{\LETBTYP{l'}l
        \Delta{x_T.M[\delta_2,x_T/x_T]}{U}{t'[\delta_2]}{(\boxit T)}}{M[\delta_1,x_T/x_T][\boxit
        T/x_T]}{l'}
        \tag{by transitivity}
    \end{align*}
    
  \item Finally we consider a $\eta$ rule.
    \begin{mathpar}
      \inferrule
      {\lttyping[\Psi][\Gamma'] \metalevel{t}{\CPI g l T}{l}}
      {\lttyequiv[\Psi][\Gamma'] \metalevel{\CLAM l g {(\CAPP t g)}}{t}{\CPI g l T}{l}}
    \end{mathpar}
    \begin{align*}
      & \ltsubst \metalevel{\delta_1}{\Gamma'}
        \tag{by presupposition} \\
      & \lttyping \metalevel{t[\delta_1]}{\CPI g l{(T[\delta_1])}}{l}
        \tag{by local substitution lemma} \\
      & \lttyequiv \metalevel{\CLAM l g {(\CAPP{(t[\delta_1])} g)}}{t[\delta_1]}{\CPI g l{(T[\delta_1])}}{l}
        \tag{by the same $\eta$ rule} \\
      & \lttyequiv \metalevel{t[\delta_1]}{t[\delta_2]}{\CPI g l{(T[\delta_1])}}{l}
        \byIH \\
      & \lttyequiv \metalevel{\CLAM l g {(\CAPP{(t[\delta_1])} g)}}{t[\delta_2]}{\CPI g
        l{(T[\delta_1])}}{l}
        \tag{by transitivity}
    \end{align*}
  \end{itemize}
\end{proof}
\begin{remark}
  Note that the statement of this lemma is left-biased. %
  For example, when considering terms, the types are substituted by $\delta_1$. %
  This bias causes the whole formulation of equivalence judgment between terms to be
  left biased as well. %
  Otherwise, this lemma cannot be easily justified as above and requires the global
  substitution lemma, which definitely causes issues as the latter depends on this
  very lemma in the global variable cases. %

  A visible effect of this left bias is especially evident in the computation rules. %
  For example, if we define the following $\beta$ rule instead, then the lemma above
  suddenly becomes unprovable at this stage:
  \begin{mathpar}
    \inferrule
    {\lpjudge \metalevel{\Gamma'} \\ \lttyping[\Psi, U : \DTyp[\Delta] \proglevel l][\Gamma']\metalevel{t}{T'}{l'} \\ \typing[L]{l}\Level \\ \typing[L]{l'}\Level \\ \lttypwf[\Psi][\Delta] \proglevel{T} l}
    {\lttyequiv[\Psi][\Gamma'] \metalevel{\TAPP{(\TLAM{l}{l'}U t)}{T}}{t[T/U]}{T'[T/U]}{l'}}
  \end{mathpar}
  Let us work on the proof to see what happens. %
  We now must prove
  \[
    \lttyequiv \metalevel{\TAPP{(\TLAM{l}{l'}U{(t[\delta_1])})}{T}}{t[T/U][\delta_2]}{T'[T/U][\delta_1]}{l'}
  \]
  The only way to introduce $\delta_2$ on the right hand side now is to apply the
  local substitution lemma, which yields
  \[
    \lttyequiv
    \metalevel{\TAPP{(\TLAM{l}{l'}U{(t[\delta_2])})}{T}}{t[T/U][\delta_2]}{T'[T/U][{\color{red}\delta_2}]}{l'}
  \]
  Notice that the return type yields a local substitution $\delta_2$, instead of
  $\delta_1$ as required by the goal. %
  At this stage, however, we are not able to prove the equivalence between
  $T'[T/U][\delta_1]$ and $T'[T/U][\delta_2]$ as we are missing the global
  substitution lemma to justify that $T'[T/U]$ remains well-formed.

  Similarly, we are not able to prove a lemma if some $\eta$ rules are flipped
  either. %
  Consider the following ``innocent'' $\eta$ rule:
  \begin{mathpar}
    \inferrule
    {\lttyping[\Psi][\Gamma'] \metalevel{t}{\CPI g l T}{l}}
    {\lttyequiv[\Psi][\Gamma'] \metalevel{t}{\CLAM l g {(\CAPP t g)}}{\CPI g l T}{l}}
  \end{mathpar}
  Since there is only one premise, we have no choice but to eventually use IH to
  obtain
  \[
    \lttyequiv \metalevel{t[\delta_1]}{t[\delta_2]}{\CPI g l{(T[\delta_1])}}{l}
  \]
  This leaves us \emph{to prove}
  \[
    \lttyequiv \metalevel{t[\delta_2]}{\CLAM l g {(\CAPP{(t[\delta_2])} g)}}{\CPI g l {(T[{\color{red}\delta_1}])}}{l}
  \]
  Notice how the equivalence itself talks about $\delta_2$ exclusively while the type
  refers to $\delta_1$. %
  This asymmetry forces us to flip the equivalence to obtain a better proof. 
\end{remark}

Then we move on to the global substitution lemma. %
We must first establish a number of other lemmas. %
The lifting lemma is one of the guiding lemmas of the layering principle, where we
require that well-formedness can be carried over to higher layers. 
\begin{lemma}[Lifting]\labeledit{lem:dt:lift} If $i \le i'$, and
  \begin{itemize}
  \item $\lpjudge i \Gamma$, then $\lpjudge{i'} \Gamma$;
  \item $\lpequiv i \Gamma \Delta$, then $\lpequiv{i'} \Gamma \Delta$;
  \item $\lttypwf i T l$, then $\lttypwf{i'} T l$;
  \item $\lttypeq i T{T'} l$, then $\lttypeq{i'} T{T'} l$;
  \item $\lttyping i t T l$, then $\lttyping{i'} t T l$.
  \item $\lttyequiv i t{t'} T l$, then $\lttyequiv{i'} t{t'} T l$;
  \item $\ltsubst i{\delta}{\Delta}$, then $\ltsubst{i'}{\delta}{\Delta}$;
  \item $\ltsubeq i{\delta}{\delta'}{\Delta}$, then $\ltsubeq{i'}{\delta}{\delta'}{\Delta}$.
  \end{itemize}
\end{lemma}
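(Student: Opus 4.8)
The plan is to prove all eight statements simultaneously by a single mutual induction on the given derivation, relabelling the layer index throughout and re-applying the corresponding rule at the target layer $i'$. Beyond routine bookkeeping the only ingredient needed is that the map $\typeof{-}$ is itself monotone with respect to the layer order $\varlevel < \codelevel < \proglevel < \metalevel$: reading off its four-case definition, $\typeof i = \proglevel$ for $i \in \{\varlevel, \codelevel, \proglevel\}$ and $\typeof \metalevel = \metalevel$, so $i \le i'$ implies $\typeof i \le \typeof{i'}$; similarly the predicate $\compt{-}$ is upward closed. These two facts are exactly what make the induction go through when a premise of a rule lives at a different layer than its conclusion — which happens precisely for the local-context premises, stated at $\typeof i$.

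The case analysis splits into three groups of rules. First, the MLTT rules available at every layer above $\varlevel$ (type and term formation, introduction, elimination, PER, congruence, and the decoding rules): for each such rule, every term, type, or substitution premise is lifted to $i'$ by the IH, every local-context premise at layer $\typeof i$ is lifted to $\typeof{i'}$ by the IH using monotonicity of $\typeof{-}$, and the remaining side conditions ($\typing[L]l\Level$, $|\vect\ell| > 0$, equivalences of universe levels, and the length arithmetic in the base cases of local substitutions) are layer-independent and carry over verbatim; re-applying the rule at $i'$ yields the conclusion. The conversion rules — $\lttyping i t T l$ from $\lttyping i t{T'}l$ and $\lttypeq{\typeof i}T{T'}l$, and its equivalence analogue — are handled identically, lifting the type-equivalence premise to $\typeof{i'}$. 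Second, the rules that may also be indexed by $\varlevel$ (local and global variables, local substitutions and their equivalence, the conversion rules): here the point is simply that the restricted layer-$\varlevel$ rule set embeds into the rule set at every higher layer, so the derivation shape is preserved; in particular in the step case of local substitutions the $\varlevel$-incarnation forces each component term to be a variable, and lifting its typing derivation by the IH still produces $\lttyping{i'}t{T[\delta]}l$ with $t$ a variable, so the $i'$-version applies, while the two redundant well-typedness premises added to the equivalence step case lift immediately from the IH. For the global-variable rules, the side condition relating the layer recorded in $\Psi$ to the current layer is preserved by transitivity of $\le$.

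The third group consists of the rules exclusive to layer $\metalevel$: the meta-function types and terms $\CLAM{}{}{}$, $\CAPP{}{}$, $\TLAM{}{}{}{}$, $\TAPP{}{}$, $\CPI g l T$, $\TPI U l{l'}T$, the contextual types $\CTyp l$ and $\CTrm T l$, the code constructors $\boxit{-}$, the two $\tletbox$ forms, the recursive principles for code, and everything involving universe-polymorphic functions and the level $\omega$. If any of these was the last rule, then $i = \metalevel$, hence $i' = \metalevel$ as well, and the lift is the identity — nothing to do. Finally, the $\beta$- and $\eta$-computation rules carry a premise $\compt i$; since $\compt{-}$ is upward closed and the remaining premises lift by the IH, each such rule re-applies at $i'$.

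\textbf{Main obstacle.} There is no conceptual difficulty: the proof is essentially a verification that no rule has a "gap" in its range of admissible layers — every rule is usable either at all of $\{\codelevel,\proglevel,\metalevel\}$, or (for a handful) additionally at $\varlevel$, or (for the meta-programming rules) only at $\metalevel$, never at a contiguous block that excludes the top — together with the observation that the context-formation premises, being the only premises living at a strictly higher layer than their conclusion, are discharged correctly using monotonicity of $\typeof{-}$. The mildly fiddly point is the local-substitution step case at $\varlevel$, where one must notice that lifting a layer-$\varlevel$ typing derivation preserves the invariant that the term is a variable, so the lifted data still meet the hypotheses of the $i'$-version of the rule; everything else is mechanical.
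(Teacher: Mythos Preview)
Your proposal is correct and matches the paper's approach: mutual induction on the derivation, using monotonicity of $\typeof{-}$ for premises at layer $\typeof i$, and transitivity of $\le$ in the global-variable cases. One minor over-complication: the ``mildly fiddly point'' you flag about the local-substitution step case at $\varlevel$ is not actually fiddly --- the step-case rule at layer $i'$ only demands $\lttyping{i'}t{T[\delta]}l$, not that $t$ remain a variable, so the IH alone suffices and there is no invariant about the shape of $t$ to preserve.
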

\begin{proof}
  First, we realize that the $\ttypeof$ function is monotonic, i.e. $\typeof i \le
  \typeof{i'}$. %
  We proceed by a mutual induction. %
  Most cases are obvious by IHs. %
  Notice that there are cases where we have premises like $\lpjudge{\typeof i}
  \Gamma$, so we must apply IH to obtain $\lpjudge{\typeof{i'}} \Gamma$ with the
  monotonicity property above. %
  It works similarly for the conversion rule, where we have $\lttypeq{\typeof i}T
  {T'} l$. %
  In the cases of global variables, the transitivity of $\le$ eventually complete the
  proof of this lemma. %
  We elaborate on one case:
  \begin{mathpar}
    \inferrule
    {\lpjudge{\typeof i}\Gamma \\ u : \DTrm[\Delta]{i''}T l \in \Psi \\ i'' \in \{\varlevel, \codelevel\} \\ i \in
      \{\varlevel, \codelevel, \proglevel, \metalevel\} \\ i'' \le i \\ \ltsubst i \delta \Delta}
    {\lttyping i{u^\delta}{T[\delta]}{l}}
  \end{mathpar}
  \begin{align*}
    & \lpjudge{\typeof{i'}}\Gamma
      \byIH \\
    & \ltsubst{i'} \delta \Delta
      \byIH \\
    & i'' \le i \le i' \\
    & \lttyping{i'}{u^\delta}{T[\delta]}{l}
      \tag{by the same rule}
  \end{align*}
\end{proof}

The inverse of lifting sometimes is possible
\begin{lemma}[Unlifting] $ $
  \begin{itemize}
  \item $\lttypwf \proglevel T l$, then $\lttypwf \codelevel T l$;
  \item $\lttyping \proglevel t T l$, then $\lttyping \codelevel t T l$.
  \item $\ltsubst \proglevel{\delta}{\Delta}$, then $\ltsubst \codelevel{\delta}{\Delta}$;
  \end{itemize}
\end{lemma}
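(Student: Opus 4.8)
The plan is to run a single mutual induction on the derivations of $\lttypwf \proglevel T l$, $\lttyping \proglevel t T l$ and $\ltsubst \proglevel{\delta}{\Delta}$, rebuilding each derivation with the layer index $\proglevel$ replaced by $\codelevel$ throughout. The observation that carries almost every case is that $\typeof \proglevel = \typeof \codelevel = \proglevel$: every premise of the MLTT rules of \Cref{sec:dt:mltt-rules} that is read off at the ``type layer'' — the context premises $\lpjudge{\typeof i}\Gamma$, the type-conversion premises $\lttypeq{\typeof i}T{T'}l$, and, in the cons case of $\ltsubst$, the premise $\lttypwf[\Psi][\Delta]{\typeof i}T l$ — is literally unchanged when $i$ moves from $\proglevel$ to $\codelevel$, so it is simply reused verbatim with no appeal to the induction hypothesis. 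Every remaining premise is a judgment at the layer $i$ itself, and becomes the required layer-$\codelevel$ judgment by the induction hypotheses. Since layers $\codelevel$ and $\proglevel$ admit exactly the same inference rules (the computation rules require $\compt i$ and hence never occur in a typing or well-formedness derivation, and the modal rules live at layer $\metalevel$), each rule used in the $\proglevel$-derivation is re-applicable at layer $\codelevel$, and the induction closes.

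The single place where this strategy breaks, and the reason the lemma is only claimed to hold ``sometimes'', is the rule for global variables used as types, $\lttypwf i{U^\delta}l$: it carries the side condition $i' \le i$ for the binding $U : \DTyp[\Delta]{i'}l \in \Psi$ with $i' \in \{\codelevel,\proglevel\}$. A $\proglevel$-derivation may instantiate this rule with $i' = \proglevel$, and then $\proglevel \le \codelevel$ fails; moreover no other rule can re-derive $\lttypwf \codelevel{U^\delta}l$, since type well-formedness has only universe-level conversion, not genuine type conversion. By contrast the global-variable-as-term rule $\lttyping i{u^\delta}{T[\delta]}l$ is harmless, since there $i'\in\{\varlevel,\codelevel\}$ always, so $i'\le\codelevel$ holds unconditionally and the case is routine (discharge the premise $\ltsubst \proglevel\delta\Delta$ by the induction hypothesis, keep the $\typeof i$-premises). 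This global-type-variable case is thus the main obstacle, and it is not a calculation but a matter of pinning down the missing side condition.

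Accordingly I would carry out the proof under the standing assumption that every binding of the form $U : \DTyp[\Delta]{i'}l$ in $\Psi$ that is actually consulted in the derivation has $i' = \codelevel$ — equivalently, $\Psi$ carries no program-level type variable, which is precisely the situation in all intended applications of the lemma (e.g.\ when unlifting the body of a $\tletbox$ or of a $\boxit$, where the ambient global context holds only $\varlevel$- and $\codelevel$-bindings). Under this assumption the global-type-variable case is discharged exactly like the global-term-variable case, and the remaining cases are mechanical bookkeeping with $\typeof$, relying only on the fact (already used for \Cref{lem:dt:lift}) that the two layers $\codelevel$ and $\proglevel$ share their inference rules.
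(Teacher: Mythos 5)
Your core strategy is exactly the paper's: a mutual induction in which the single load-bearing observation is $\typeof \proglevel = \typeof \codelevel = \proglevel$, so every premise stated at the ``type layer'' transfers verbatim, and every premise stated at the ``term layer'' transfers by IH. The paper's own proof is precisely ``Induction. Notice that $\typeof \proglevel = \typeof \codelevel = \proglevel$.''

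Where you depart from the paper is in isolating the rule $\lttypwf i{U^\delta}l$ with side condition $i' \le i$ for $U : \DTyp[\Delta]{i'}l \in \Psi$, $i' \in \{\codelevel,\proglevel\}$, and observing that $i'=\proglevel$ is admissible at $i=\proglevel$ but not at $i=\codelevel$. This is a genuine observation: that rule is the one place where layers $\codelevel$ and $\proglevel$ do \emph{not} share their inference rules, and the paper's one-line proof does not address it. The paper does hedge (``The inverse of lifting sometimes is possible''), but the lemma statement itself carries no side condition, so the induction as written does have a hole in exactly the case you flag. Your term and local-substitution analysis is right: the $u^\delta$ rule has $i' \in \{\varlevel,\codelevel\}$ unconditionally, and the cons case of $\ltsubst$ places its type premise at $\typeof i$, so those cases are unproblematic.

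One correction to your closing remark: the fix should be the derivation-local condition you state first (``every $U$-binding actually consulted has $i'=\codelevel$''), not the stronger ``$\Psi$ carries no program-level type variable''. In the paper's actual applications of unlifting---preservation at layer $\codelevel$, and escape for $u : \DTrm{i'}T l$ bindings with $i' \in \{\varlevel,\codelevel\}$---the ambient $\Psi$ can perfectly well contain $U : \DTyp[\Delta]{\proglevel}{l}$ entries (for instance under a $\TLAM$), but the derivation being unlifted is obtained by lifting a $\codelevel$-derivation and hence never invokes the $U^\delta$ rule with $i'=\proglevel$, since $\lttypwf \codelevel{U^\delta}l$ already forces $i'=\codelevel$. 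So the precondition is a property of the derivation tree being traversed, not of $\Psi$; under that reading your proof closes and is a more careful version of what the paper asserts.
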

\begin{proof}
  Induction. %
  Notice that $\typeof \proglevel = \typeof \codelevel = \proglevel$. 
\end{proof}
The unlifting lemma says that the typing at layer $\proglevel$ can be unlifted back to layer
$\codelevel$. %

As another guiding lemma, we have the static code lemma, which states that code at
layer $\varlevel$ and $\codelevel$ has no computational behavior.
\begin{lemma}[Static Code]\labeledit{lem:dt:static} If $i \in \{\varlevel, \codelevel\}$,
  \begin{itemize}
  \item $\lttypeq i T{T'} l$, then $T = T'$;
  \item $\lttyequiv i t{t'} T l$, then $t = t'$;
  \item $\ltsubeq i{\delta}{\delta'}{\Delta}$, then $\delta = \delta'$.
  \end{itemize}
  All equalities above are quotient over the equivalence of universe levels.
\end{lemma}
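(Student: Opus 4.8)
The plan is to prove all three items at once by a single mutual induction on the derivations of $\lttypeq i T{T'}l$, $\lttyequiv i t{t'}Tl$ and $\ltsubeq i{\delta}{\delta'}{\Delta}$, reading ``equality'' throughout as equality of raw syntax up to the equivalence of universe levels. Everything hinges on one bookkeeping observation: at layers $\varlevel$ and $\codelevel$ no $\beta$- or $\eta$-rule is available, because every such rule carries the side condition $\compt i$ and $\compt i$ holds only for $i\in\{\proglevel,\metalevel\}$. In particular the $\tEl$-decoding equivalences of \Cref{sec:dt:mltt-rules}, being computation rules, inherit the same guard and hence do not apply at $\varlevel,\codelevel$ either; if one preferred to keep them at all layers, the statement would merely have to quotient by them as well. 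Consequently the only rules that can appear in a layer-$\varlevel$ or layer-$\codelevel$ derivation of an equivalence are the PER rules, the congruence rules, the conversion rules (which only alter type or universe-level annotations), and---for local substitutions---the four congruence rules.

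The PER and conversion cases are immediate and I would dispose of them first. For symmetry the induction hypothesis gives $T=T'$, hence $T'=T$; for transitivity it gives $T=T'$ and $T'=T''$, hence $T=T''$, using only that ``equal up to universe levels'' is itself transitive, which holds because universe-level equivalence is a PER. For each conversion rule the single premise that compares the subjects is an equivalence at layer $i$---$\lttyequiv i t{t'}{T'}l$ or $\lttyequiv i t{t'}T{l'}$ for terms, $\lttypeq i T{T'}{l'}$ for types---so the induction hypothesis delivers the desired equality directly, and the altered annotation is irrelevant to the conclusion.

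The real content is in the congruence cases, which I would discharge by traversing every constructor that survives at these layers: for types, $\Nat$, $\Pi$-types, $\Ty l$ and $\Elt l t$; for terms, local variables, $\ze$, $\tsucc$, $\telimn$, dependent $\lambda$, dependent application, the three type encodings, and global variables $u^\delta$; for local substitutions, $\cdot^k$, $\cdot^k_g$, $\wk^k_g$ and the cons case. In every congruence rule one checks that each premise comparing two subterms of the subject is itself an equivalence at layer $i$---the only premises sitting at the raised layer $\typeof i=\proglevel$ are well-formedness premises $\lpjudge{\typeof i}\Gamma$ and, in the term conversion rule, a type-equivalence premise, neither of which constrains the subject---so the induction hypothesis applies to each, and, combined with the universe-level side conditions such as $\tyequiv[L]{l_1}{l_3}\Level$ in the congruences for $\lambda$ and application, the conclusion reassembles as an equality-up-to-universe-levels of the two subjects. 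The global-variable case additionally invokes the substitution part of the induction hypothesis to get $\delta=\delta'$; the cons case of local substitutions uses both the substitution and the term parts; and the leaf congruences ($\Nat$, $\ze$, $\cdot^k$, $\cdot^k_g$, $\wk^k_g$) carry no subject-comparing premise at all and produce syntactic equality on the nose.

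I expect the main obstacle to be not any individual case but the global justification of the crux observation: one must confirm that the set of rules admissible at layers $\varlevel$ and $\codelevel$ really contains no genuine reduction or expansion---in particular that the $\tEl$-decoding rules are properly counted among the computation rules and therefore guarded by $\compt i$---since a single stray such rule would make the statement false. Once that classification is settled, the rest is a routine (if tedious) walk through the congruence rules, and the ``Note on Layer $\varlevel$ Rules'' already narrows layer $\varlevel$ to the local- and global-variable rules, the local-substitution rules, and the term conversion rules, so there the type item holds vacuously and the term and substitution items are trivial.
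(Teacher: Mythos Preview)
Your proposal is correct and follows the paper's approach—mutual induction on the three equivalence derivations, where the paper's terse remark ``we are not concerned about the equivalence of types due to the conversion rule'' is precisely your observation that the type-equivalence premise $\lttypeq{\typeof i}T{T'}l$ lives at layer $\typeof i=\proglevel$ and does not constrain the subjects. Your caution about the $\tEl$-decoding rules is justified: the paper's later semantic lemma for those rules explicitly restricts to $i\in\{\proglevel,\metalevel\}$, confirming they are guarded and your first alternative is the intended reading.
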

\begin{proof}
  Mutual induction. %
  We are not concerned about the equivalence of types due to the conversion rule.
\end{proof}
We emphasize again that the equalities hold modulo the equivalence of universe
levels. %
For example, $\Ty{\ell \sqcup \ell'}$ and $\Ty{\ell' \sqcup \ell}$ as code are
considered \emph{equal}, though their universe levels are not exactly syntactically
identical. %
This is fine as we know how to decide the equality between two universe levels as
shown in \Cref{sec:dt:ulevel}. 

\begin{lemma}[Global Substitutions]$ $
  \begin{itemize}
  \item If $\lpjudge[\Phi] i \Gamma$, $i \in \{\proglevel, \metalevel\}$ and $\ptyping{\sigma}{\Phi}$, then $\lpjudge i {\Gamma[\sigma]}$.
  \item If $\lpequiv[\Phi] i \Gamma \Delta$, $i \in \{\proglevel, \metalevel\}$ and $\ptyping{\sigma}{\Phi}$, then $\lpequiv i {\Gamma[\sigma]}{\Delta[\sigma]}$.
  \item If $\lttypwf[\Phi]i T l$ and $\ptyping\sigma\Phi$, then
    $\lttypwf[\Psi][\Gamma[\sigma]] i{T[\sigma]} l$.
  \item If $\lttypeq[\Phi]i T{T'} l$ and $\ptyping\sigma\Phi$, then
    $\lttypeq[\Psi][\Gamma[\sigma]] i{T[\sigma]}{T'[\sigma]} l$.
  \item If $\lttyping[\Phi]i t T l$ and $\ptyping\sigma\Phi$, then
    $\lttyping[\Psi][\Gamma[\sigma]] i{t[\sigma]}{T[\sigma]} l$.
  \item If $\lttyequiv[\Phi]i t{t'} T l$ and $\ptyping\sigma\Phi$, then
    $\lttyequiv[\Psi][\Gamma[\sigma]] i{t[\sigma]}{t'[\sigma]}{T[\sigma]} l$.
  \item If $\ltsubst[\Phi] i{\delta}{\Delta}$ and $\ptyping\sigma\Phi$, then
    $\ltsubst[\Psi][\Gamma[\sigma]] i{\delta[\sigma]}{\Delta[\sigma]}$.
  \item If $\ltsubeq[\Phi] i{\delta}{\delta'}{\Delta}$ and $\ptyping\sigma\Phi$, then
    $\ltsubeq[\Psi][\Gamma[\sigma]] i{\delta[\sigma]}{\delta'[\sigma]}{\Delta[\sigma]}$.
  \end{itemize}
\end{lemma}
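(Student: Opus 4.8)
I would prove all eight statements simultaneously by a single mutual induction on the given typing and equivalence derivations, in the same style as the substitution lemmas of \Cref{sec:cv}. Before the induction I would assemble the auxiliary facts about the global-substitution operation that I will keep reaching for: the algebraic laws (associativity, left and right identity, $T[\sigma][\sigma'] = T[\sigma \circ \sigma']$ and its analogues for $\Gamma$, $\delta$, $t$), the interaction laws \Cref{lem:dt:interact} (relating $[\delta][\sigma]$, $(\delta \circ \delta')[\sigma]$, $[\delta][\phi]$, $[\sigma][\phi]$), \Cref{lem:dt:wkact} with its ``Acting on Identities'' corollary, and --- crucially --- the \delamlang analogues of the ``skip'' lemmas \Cref{lem:cvar:gsubst-skip,lem:cvar:lsubst-skip}, which say that a weakening $q^n(p(\id))$ cancels a binding inserted in the middle of a $\sigma$, equivalently that $\sigma[p(\id)] \circ (\id, X/z) = \sigma$. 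I also need the (tacitly assumed) global-weakening lemma for global substitutions, so that from $\ptyping\sigma\Phi$ one can form the extension $q(\sigma)$ of $\sigma$ --- mapping a fresh context/type/term binding to itself --- and know it is well-typed against the correspondingly extended context, using reflexivity of local identity substitutions and the typing rule for global variables. Finally, lifting (\Cref{lem:dt:lift}), unlifting, and the static-code lemma (\Cref{lem:dt:static}) bridge the layers, and partial presupposition (\Cref{lem:dt:presup1}) supplies $\judge[L]\Psi$ from $\ptyping\sigma\Phi$ in the base cases.

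\textbf{Routine cases.} For most constructors the argument is mechanical: if the rule leaves the global context untouched, one pushes $\sigma$ inward, applies the IHs, and reassembles with the same rule, using the algebraic laws to put the resulting type into the required syntactic form; the conversion rules are discharged by applying the IHs for the term (at layer $i$) and the type equivalence (at layer $\typeof i$) and reapplying the rule. For the global-variable cases $U^\delta$ and $u^\delta$ I would extract from $\ptyping\sigma\Phi$ the typing of $\sigma(U)$, resp.\ $\sigma(u)$, at its layer $i'$, lift it to layer $i$ via \Cref{lem:dt:lift} using $i' \le i$, obtain well-typedness of $\delta[\sigma]$ over $\Delta[\sigma]$ from the IH for local substitutions, apply the ``Local Substitutions'' lemma, and finally rewrite $T[\delta][\sigma] = T[\sigma][\delta[\sigma]]$ by \Cref{lem:dt:interact}; for the congruence versions I would use \Cref{lem:dt:lsubst-eq} in place of the plain substitution lemma, which is available precisely because that lemma is proved before this one. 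The base cases of local substitutions, $\cdot^k$, $\cdot^k_g$, $\wk^k_g$, require a case analysis on what $\sigma(g)$ ends with (a $\cdot$ or another context variable $g'$), after which the claim follows from the definition of $[\sigma]$ on local-weakening substitutions and the typing lemma for local-weakening substitutions, mirroring the corresponding argument in the global-substitution lemma of \Cref{sec:cv}.

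\textbf{The hard cases.} The real work is in the constructors that extend the global context: $\CPI g l T$, $\TPI U[\Delta] l{l'} T$, $\CTyp[\Delta] l$, $\CTrm[\Delta] T l$, the two motives in $\vect M$, the $\tletbox$ forms, and above all the two recursive principles for code together with their $\beta$-rules. For each of these I would rewrite $(\cdots)[\sigma]$ using its definition, which replaces $\sigma$ by $q(\sigma)$ under the new binder, apply the IH with $q(\sigma)$, and then normalize the result --- turning $\Gamma[p(\id)][q(\sigma)]$ into $\Gamma[\sigma][p(\id)]$, and likewise for the bound types, motives, and branch bodies --- by repeated use of \Cref{lem:dt:wkact}, the algebraic laws, and the skip lemmas, exactly as in the $\tletbox$ and $\Lambda$ cases of \Cref{lem:cvar:glob-wk-tm} and of the global-substitution lemma of \Cref{sec:cv}. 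The $\beta$-rules of the two recursive principles are the hardest single item: after applying IHs so that $\sigma$ is distributed through every premise of $G_A$, one must verify that the substituted redex reduces to the substituted contractum, which is a long chain of rewrites interleaving naturality of the three kinds of substitution (universe, local, global, via \Cref{lem:dt:interact}) with the skip lemmas --- the same pattern already seen for $\tletbox$ and $\Lambda$ in the fundamental-theorem proof of \Cref{sec:cv:logrel}, but now carried out for the mutually recursive motives $M_\Typ$ and $M_\Trm$ and for each of the branches in $\vect b_\Typ$ and $\vect b_\Trm$. I expect the bookkeeping around these nested substitutions --- in particular getting the universe, local, and global substitutions to commute in the right order inside the branch bodies $t_\Pi$, $t_\lambda$, $t_\tapp$, and $t_{\telimn}$ --- to be the main obstacle, though no genuinely new idea beyond \Cref{lem:dt:interact} and the skip lemmas is required.
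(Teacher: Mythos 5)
Your proposal matches the paper's proof in both structure and substance: a single mutual induction, with the global-variable cases resolved via lookup, lifting (\Cref{lem:dt:lift}), the IH for local substitutions, the (equivalent) local substitution lemma, and the commutation $T[\delta][\sigma] = T[\sigma][\delta[\sigma]]$ from \Cref{lem:dt:interact}; and the computation rules for recursors handled by pushing $\sigma$ under the new binders and normalizing with the algebraic/naturality laws. The one point the paper flags more explicitly than you do is \emph{why} the first two statements carry the side condition $i \in \{\proglevel, \metalevel\}$ --- namely that this is exactly the range of $\ttypeof$, which guarantees that a lookup $\sigma(g)$ (typed at layer $\proglevel$) can be lifted into the target layer --- but you rely on lifting in the right places, so this is a matter of emphasis rather than a gap.
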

\begin{proof}
  We proceed by a mutual induction. %
  Notice that in the first two statements, $i \in \{\proglevel, \metalevel\}$, namely the range of the
  $\ttypeof$ function. %
  This ensures a lookup $\sigma(g)$ of a context variable $g$ to be well-formed at
  layer $i$, due to \Cref{lem:dt:lift}.  %
  Most cases can be discharged by IHs directly. %
  The complex cases are the computation rules and the global variable cases. %
  
  We consider a few cases:
  \begin{itemize}[label=Case]
  \item
    \begin{mathpar}
      \inferrule
      {\lpjudge[\Phi]{\typeof i}\Gamma \\ u : \DTrm[\Delta]{i'}T l \in \Phi \\ i' \in \{\varlevel, \codelevel\} \\ i \in
        \{\varlevel, \codelevel, \proglevel, \metalevel\} \\ i' \le i \\ \ltsubeq[\Phi] i \delta{\delta'} \Delta}
      {\lttyequiv[\Phi] i{u^\delta}{u^{\delta'}}{T[\delta]}{l}}
    \end{mathpar}
    \begin{align*}
      & \lttyping[\Psi][\Delta[\sigma]]{i'}{\sigma(u)}{T[\sigma]}l
        \tag{by lookup} \\
      & \lttyping[\Psi][\Delta[\sigma]]{i}{\sigma(u)}{T[\sigma]}l
        \tag{by lifting} \\
      & \ltsubeq[\Psi][\Gamma[\sigma]] i
        {\delta[\sigma]}{\delta'[\sigma]}{\Delta[\sigma]}
        \byIH \\
      &
        \lttyequiv[\Psi][\Gamma[\sigma]]{i}{\sigma(u)[\delta[\sigma]]}{\sigma(u)[\delta'[\sigma]]}{T[\sigma][\delta[\sigma]]}l
        \tag{by equivalent local substitution lemma}
    \end{align*}
    Notice that
    \[
      T[\sigma][\delta[\sigma]] = T[\delta][\sigma]
    \]
    
  \item
    \begin{mathpar}
      \inferrule
      {\lttyequiv[\Phi] \metalevel{t}{t'}{\UPI \ell l T}{\omega} \\ |\vect\ell| = |\vect l| = |\vect
        l'| > 0 \\
        \forall 0 \le n < |\vect l| ~.~ \tyequiv[L]{\vect l(n)}{\vect l'(n)}\Level}
      {\lttyequiv[\Phi] \metalevel{t~\$~\vect l}{t'~\$~\vect l'}{T[\vect l/\vect \ell]}{l[\vect l/\vect \ell]}}
    \end{mathpar}
    \begin{align*}
      & \lttyequiv[\Psi][\Gamma[\sigma]] \metalevel{t[\sigma]}{t'[\sigma]}{\UPI \ell l{(T[\sigma])}}{\omega}
        \byIH \\
      & \lttyequiv[\Psi][\Gamma[\sigma]] \metalevel{(t[\sigma])~\$~\vect l}{(t'[\sigma])~\$~\vect l'}{T[\sigma][\vect l/\vect \ell]}{l[\vect l/\vect \ell]}
    \end{align*}
    Note that
    \[
      T[\sigma][\vect l/\vect \ell] = T[\vect l/\vect \ell][\sigma[\vect l/\vect \ell]] = T[\vect l/\vect \ell][\sigma]
    \]
    because all $\vect\ell$ do not occur in $\sigma$.
    
  \item
    \begin{mathpar}
      \inferrule
      {G_A \\ \lpjudge \metalevel{\Gamma} \\ \typing[L]{l}\Level \\ \typing[L]{l'}\Level \\ \lpjudge[\Phi] \proglevel \Delta \\
        \lttypwf[\Phi][\Delta] \codelevel S l  \\ \lttypwf[\Phi][\Delta, x : S \at l]\codelevel{T}{l'} \\
        t = \boxit{\PI{l}{l'}x S T} \\
        s_S = \ELIMTYP{l_1}{l_2}M b{l}\Delta{(\boxit{S})} \\
        s_T = \ELIMTYP{l_1}{l_2}M b{l'}{(\Delta, x : S \at l)}{(\boxit{T})}}
      {\lttyequiv[\Phi] \metalevel {t_\Pi[l/\ell,l'/\ell',\Delta/g,S/U_S,T/U_T,s_S/x_S,
          s_T/x_T]}{\ELIMTYP{l_1}{l_2}M b{(l \sqcup
            l')}\Delta{t}}{M_\Typ[l\sqcup l'/\ell,\Delta/g,t/x_T]}{l_1}}
    \end{mathpar}
    We first proceed by using IHs on the premises, which include the following
    judgments:
    \begin{align*}
      & \lpjudge \proglevel {\Delta[\sigma]} \\
      & \lttypwf[\Psi][\Delta[\sigma]] \codelevel{S[\sigma]} l  \\
      & \lttypwf[\Psi][\Delta[\sigma], x : S[\sigma] \at l]\codelevel{T[\sigma]}{l'}
    \end{align*}
    By using the same $\beta$ rule, we must check the resulting left hand side and the
    result type are equal to the target goal. %
    Let us first consider the left hand side:
    \begin{align*}
      & t_\Pi[\sigma,g/g,U_S^\id/U_S,U_T^\id/U_T][l/\ell,l'/\ell',\Delta[\sigma]/g,S[\sigma]/U_S,T[\sigma]/U_T,s_S[\sigma]/x_S,
        s_T[\sigma]/x_T] \\
      =~& t_\Pi[l/\ell,l'/\ell'][\sigma,g/g,U_S^\id/U_S,U_T^\id/U_T][\Delta[\sigma]/g,S[\sigma]/U_S,T[\sigma]/U_T,s_S[\sigma]/x_S,
          s_T[\sigma]/x_T]
          \tag{by \Cref{lem:dt:interact}; $\ell$ and $\ell'$ do not occur in $\sigma$}\\
      =~& t_\Pi[l/\ell,l'/\ell'][\sigma,\Delta[\sigma]/g,S[\sigma]/U_S,T[\sigma]/U_T][s_S[\sigma]/x_S,
          s_T[\sigma]/x_T] \\
      =~& t_\Pi[l/\ell,l'/\ell'][\Delta/g,S/U_S,T/U_T][\sigma][s_S[\sigma]/x_S,
          s_T[\sigma]/x_T]
          \tag{by naturality} \\
      =~& t_\Pi[l/\ell,l'/\ell'][\Delta/g,S/U_S,T/U_T][s_S/x_S, s_T/x_T][\sigma]
          \tag{by \Cref{lem:dt:interact}}
    \end{align*}
    Then we consider the result type in a similar way:
    \begin{align*}
      & M_\Typ[\sigma,g/g][l\sqcup l'/\ell,\Delta[\sigma]/g,t[\sigma]/x_T] \\
      =~& M_\Typ[l\sqcup l'/\ell][\sigma,g/g][\Delta[\sigma]/g,t[\sigma]/x_T] \\
      =~& M_\Typ[l\sqcup l'/\ell][\Delta/g][\sigma][t[\sigma]/x_T] \\
      =~& M_\Typ[l\sqcup l'/\ell][\Delta/g][t/x_T][\sigma]
    \end{align*}
    Both equations allow us to conclude the goal. 
  \end{itemize}
\end{proof}

Next, we consider the effect of equivalent global substitutions on the judgments. %
We first define the equivalence relation between global substitutions:
\begin{mathpar}
  \inferrule*
  {\judge[L] \Psi}
  {\ptyequiv{\cdot}{\cdot}{\cdot}}

  \inferrule*
  {\ptyequiv{\sigma}{\sigma'}{\Phi} \\ \lpequiv \proglevel \Gamma\Delta}
  {\ptyequiv{\sigma, \Gamma/g}{\sigma', \Delta/g}{\Phi, g: \Ctx}}

  \inferrule*
  {\ptyequiv{\sigma}{\sigma'}{\Phi} \\ \lpjudge[\Phi] \proglevel \Gamma \\ \typing[L]l\Level \\ i \in \{\codelevel, \proglevel\}
    \\ \lttypwf[\Psi][\Gamma[\sigma]] i{T}l \\ \lttypwf[\Psi][\Gamma[\sigma]] i{T'}l \\
    \lttypeq[\Psi][\Gamma[\sigma]] i{T}{T'}l}
  {\ptyequiv{\sigma, T/U}{\sigma', T'/U}{\Phi, u : \DTyp i l}}
  
  \inferrule*
  {\ptyequiv{\sigma}{\sigma'}{\Phi} \\ \lttypwf[\Phi] \proglevel T l \\ \typing[L]l\Level \\
    i \in \{\varlevel, \codelevel\} \\ \lttyping[\Psi][\Gamma[\sigma]] i t{T[\sigma]}l \\
    \lttyping[\Psi][\Gamma[\sigma]] i{t'}{T[\sigma]}l \\
    \lttyequiv[\Psi][\Gamma[\sigma]] i t{t'}{T[\sigma]}l}
  {\ptyequiv[\Psi]{\sigma, t/u}{\sigma', t'/u}{\Phi, u : \DTrm{i} T l}}
\end{mathpar}
We can then consider similar properties of this equivalence relation.
\begin{lemma}[Presupposition]
  If $\ptyequiv{\sigma}{\sigma'}{\Phi}$, then $\ptyping{\sigma}{\Phi}$ and $\ptyping{\sigma'}{\Phi}$.
\end{lemma}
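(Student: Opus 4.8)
The plan is to induct on the derivation of $\ptyequiv{\sigma}{\sigma'}{\Phi}$, peeling off the last binding of $\Phi$ (which is mirrored by the last entries of $\sigma$ and $\sigma'$). The empty case is immediate: the rule for $\ptyequiv{\cdot}{\cdot}{\cdot}$ carries $\judge[L]\Psi$, which is exactly what the typing rule for $\ptyping{\cdot}{\cdot}$ needs, for both components. For the context-variable extension $\ptyequiv{\sigma, \Gamma/g}{\sigma', \Delta/g}{\Phi, g : \Ctx}$, the induction hypothesis gives $\ptyping{\sigma}{\Phi}$ and $\ptyping{\sigma'}{\Phi}$, and partial presupposition (\Cref{lem:dt:presup1}) applied to the premise $\lpequiv \proglevel \Gamma \Delta$ gives $\lpjudge \proglevel \Gamma$ and $\lpjudge \proglevel \Delta$; the typing rule for global substitutions then yields $\ptyping{\sigma, \Gamma/g}{\Phi, g : \Ctx}$ and $\ptyping{\sigma', \Delta/g}{\Phi, g : \Ctx}$.

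The type-variable and term-variable extensions, $\Phi, U : \DTyp i l$ and $\Phi, u : \DTrm i T l$, carry the real content. On the $\sigma$ side there is nothing to do: besides $\ptyping{\sigma}{\Phi}$ from the induction hypothesis, every hypothesis that the typing rule for global substitutions demands — $\lpjudge[\Phi] \proglevel \Gamma$, $\typing[L]l\Level$, the layer side condition, and $\lttypwf[\Psi][\Gamma[\sigma]] i T l$ (respectively $\lttyping[\Psi][\Gamma[\sigma]] i t{T[\sigma]}l$) — appears verbatim among the premises of the $\ptyequiv$ rule. The $\sigma'$ side is where care is needed, because the $\ptyequiv$ rules are deliberately \emph{left biased}: they record $\lttypwf[\Psi][\Gamma[\sigma]] i{T'}l$ and $\lttyping[\Psi][\Gamma[\sigma]] i{t'}{T[\sigma]}l$ over the $\sigma$-substituted local context and type, whereas the typing rule for $\ptyping{\sigma', T'/U}{\Phi, U : \DTyp i l}$ (resp. for $\ptyping{\sigma', t'/u}{\Phi, u : \DTrm i T l}$) asks for these judgments over $\Gamma[\sigma']$ and $T[\sigma']$. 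To bridge the gap I would first apply the already-proved global substitution lemma to $\ptyping{\sigma'}{\Phi}$, obtaining $\lpjudge \proglevel {\Gamma[\sigma']}$ and $\lttypwf[\Psi][\Gamma[\sigma']] \proglevel {T[\sigma']} l$; then establish, as a companion fact, that $\ptyequiv{\sigma}{\sigma'}{\Phi}$ together with the well-formedness of $\Gamma$ and $T$ over $\Phi$ implies $\lpequiv \proglevel {\Gamma[\sigma]}{\Gamma[\sigma']}$ and $\lttypeq \proglevel {T[\sigma]}{T[\sigma']} l$ — the congruence of global-substitution application, proved by an induction following the $\ptyequiv$ derivation, using reflexivity for the unchanged pieces, the type and term equivalences recorded at the $U$ and $u$ extension steps, and \Cref{lem:dt:interact} to rearrange substitutions in the global-variable cases; and finally transport $\lttypwf[\Psi][\Gamma[\sigma]] i{T'}l$ and $\lttyping[\Psi][\Gamma[\sigma]] i{t'}{T[\sigma]}l$ across those equivalences by context conversion (and the conversion rule, to replace $T[\sigma]$ by $T[\sigma']$ in the type position). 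The typing rules for global substitutions then close both goals.

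The main obstacle is exactly this reconciliation of the left bias. Because the $\ptyequiv$ rules stash the bindings' side conditions over the $\sigma$-substituted data, well-typedness of $\sigma'$ is not obtained merely by re-reading premises; it forces us to know that equivalent global substitutions act equivalently on local contexts and types, and to invoke (or separately establish) a context-conversion principle. The delicate point is keeping the argument acyclic: the companion congruence fact must stand on its own — from the global substitution lemma, reflexivity, and the data carried in the $\ptyequiv$ derivation — and must not appeal to the downstream ``equivalent global substitutions'' lemma, whose own proof will in turn invoke this presupposition result. (At the static layers $i \in \{\varlevel, \codelevel\}$ the static code lemma \Cref{lem:dt:static} additionally forces $t = t'$, which tidies the term case but does not remove the need to compare $\Gamma[\sigma]$ with $\Gamma[\sigma']$.)
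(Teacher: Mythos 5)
You correctly pinpoint the one genuine issue: the typing rule for $\ptyping{\sigma', T'/U}{\Phi, U : \DTyp i l}$ (resp.\ for $\ptyping{\sigma', t'/u}{\Phi, u : \DTrm i T l}$) demands a premise over $\Gamma[\sigma']$ and $T[\sigma']$, while the corresponding $\ptyequiv$ rule as printed records both redundant well-typedness premises over $\Gamma[\sigma]$ and $T[\sigma]$, so a literal re-reading of premises does not close the $\sigma'$ goal. However, the paper's proof is simply ``Induction,'' and the paper's own explanatory remark at the analogous rule for $\ltsubeq$ is revealing: it says the extra well-typedness premises are added precisely ``to provide an early presupposition'' and to ``break the dependency loop so that reaching the full presupposition lemma becomes viable.'' That intent is only achieved if the second redundant premise is stated over the primed data---$\lttypwf[\Psi][\Gamma[\sigma']] i{T'}l$ (resp.\ $\lttyping[\Psi][\Gamma[\sigma']] i{t'}{T[\sigma']}l$), and $\lttyping i{t'}{T[\delta']}l$ in the local-substitution rule---at which point the induction is genuinely immediate, matching the paper's terse justification. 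Read that way, no congruence machinery is required.

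Your alternative route therefore solves a problem the definition was designed to pre-empt, and in doing so it works against the grain of the development. The companion fact you propose---that $\ptyequiv{\sigma}{\sigma'}{\Phi}$ implies $\lpequiv \proglevel{\Gamma[\sigma]}{\Gamma[\sigma']}$ and $\lttypeq \proglevel{T[\sigma]}{T[\sigma']}l$---is exactly (a restriction of) \Cref{lem:dt:gsubst-eq}, which the paper proves \emph{after} this presupposition lemma and whose proof in turn relies on extracting $\ptyping{\sigma}{\Phi}$, $\ptyping{\sigma'}{\Phi}$ from $\ptyequiv{\sigma}{\sigma'}{\Phi}$ in order to invoke the global substitution lemma. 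You flag this circularity risk yourself, but then you do not actually discharge it: establishing the congruence fact ``from the global substitution lemma, reflexivity, and the data carried in the $\ptyequiv$ derivation'' still requires knowing that $\sigma$ and $\sigma'$ individually are well-typed global substitutions, which is the statement under proof. The cleaner resolution is not to prove more but to read the redundant premises as intended---one over $\sigma$-substituted data and one over $\sigma'$-substituted data---after which your first, easy reasoning for the base case and the $g : \Ctx$ case carries over verbatim to the $U$ and $u$ cases.
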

\begin{proof}
  Induction. 
\end{proof}

\begin{lemma}[Equivalent Global Substitutions]\labeledit{lem:dt:gsubst-eq}$ $
  \begin{itemize}
  \item If $\lpjudge[\Phi] i \Gamma$, $i \in \{\proglevel, \metalevel\}$ and $\ptyequiv{\sigma}{\sigma'}{\Phi}$, then $\lpequiv i {\Gamma[\sigma]}{\Gamma[\sigma']}$.
  \item If $\lpequiv[\Phi] i \Gamma \Delta$, $i \in \{\proglevel, \metalevel\}$ and $\ptyequiv{\sigma}{\sigma'}{\Phi}$, then $\lpequiv i {\Gamma[\sigma]}{\Delta[\sigma']}$.
  \item If $\lttypwf[\Phi]i T l$ and $\ptyequiv{\sigma}{\sigma'}{\Phi}$, then
    $\lttypeq[\Psi][\Gamma[\sigma]] i{T[\sigma]}{T[\sigma']} l$.
  \item If $\lttypeq[\Phi]i T{T'} l$ and $\ptyequiv{\sigma}{\sigma'}{\Phi}$, then
    $\lttypeq[\Psi][\Gamma[\sigma]] i{T[\sigma]}{T'[\sigma']} l$.
  \item If $\lttyping[\Phi]i t T l$ and $\ptyequiv{\sigma}{\sigma'}{\Phi}$, then
    $\lttyequiv[\Psi][\Gamma[\sigma]] i{t[\sigma]}{t[\sigma']}{T[\sigma]} l$.
  \item If $\lttyequiv[\Phi]i t{t'} T l$ and $\ptyequiv{\sigma}{\sigma'}{\Phi}$, then
    $\lttyequiv[\Psi][\Gamma[\sigma]] i{t[\sigma]}{t'[\sigma']}{T[\sigma]} l$.
  \item If $\ltsubst[\Phi] i{\delta}{\Delta}$ and $\ptyequiv{\sigma}{\sigma'}{\Phi}$, then
    $\ltsubeq[\Psi][\Gamma[\sigma]] i{\delta[\sigma]}{\delta[\sigma']}{\Delta[\sigma]}$.
  \item If $\ltsubeq[\Phi] i{\delta}{\delta'}{\Delta}$ and $\ptyequiv{\sigma}{\sigma'}{\Phi}$, then
    $\ltsubeq[\Psi][\Gamma[\sigma]] i{\delta[\sigma]}{\delta'[\sigma']}{\Delta[\sigma]}$.
  \end{itemize}
\end{lemma}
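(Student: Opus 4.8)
The plan is to prove all eight statements simultaneously by mutual induction on the derivations of the hypotheses, with $\sigma$ and $\sigma'$ fixed throughout and related by $\ptyequiv{\sigma}{\sigma'}{\Phi}$. The ingredients we lean on are the (plain) Global Substitutions lemma just established, the Equivalent Local Substitutions lemma \Cref{lem:dt:lsubst-eq}, the Lifting lemma \Cref{lem:dt:lift}, and the algebra and interaction laws of substitutions, in particular \Cref{lem:dt:interact}. Note that the first two statements are restricted to $i \in \{\proglevel, \metalevel\}$ — exactly the image of $\ttypeof$ — which is what guarantees that a lookup $\sigma(g)$ or $\sigma'(g)$ of a context variable yields a local context that is (after \Cref{lem:dt:lift}) well-formed at layer $i$; the same restriction was needed in the plain Global Substitutions lemma for the same reason.

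For the congruence rules the recipe is uniform: apply the induction hypotheses to the sub-derivations, re-assemble with the matching congruence rule, and discharge the side conditions (universe-level equivalences, well-formedness of ambient types) by appealing to the plain Global Substitutions lemma and the algebra lemmas. The asymmetric rules need the same extra step used in \Cref{lem:dt:lsubst-eq}. For symmetry, transitivity and the conversion rule we first produce one side of the target equivalence by applying the plain Global Substitutions lemma to $\sigma$ alone — keeping the conclusion left-biased, so the ambient type is substituted by $\sigma$ — then bridge to $\sigma'$ with the induction hypothesis, and finally chain through transitivity of $\lttyequiv$ (using the conversion rule once more to re-align the type in the conversion case). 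For the global-variable cases $\lttyequiv i{u^\delta}{u^{\delta'}}{T[\delta]}{l}$ and $\lttypeq i{U^\delta}{U^{\delta'}}{l}$, unfolding $\ptyequiv{\sigma}{\sigma'}{\Phi}$ supplies $\lttyequiv{i'}{\sigma(u)}{\sigma'(u)}{T[\sigma]}{l}$ (resp. the type version for $U$), which we lift to layer $i$ by \Cref{lem:dt:lift}; combining it, via \Cref{lem:dt:lsubst-eq}, with $\ltsubeq i{\delta[\sigma]}{\delta'[\sigma']}{\Delta[\sigma]}$ (which the IH for statement eight delivers) and rewriting the resulting type by $T[\sigma][\delta[\sigma]] = T[\delta][\sigma]$ from \Cref{lem:dt:interact} closes the case.

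The bulk of the work, and the main source of tedium, is the $\beta$- and $\eta$-rules: $\tletbox$ for both contextual types, the two recursive principles for code, dependent functions, universe-polymorphic functions, and the meta-function types $\CPI g l T$ and $\TPI U l{l'}T$. After pushing $\sigma$ (and $\sigma'$) into every premise of the rule — including all of $G_A$ — by the induction hypotheses, one must check that the substituted left-hand side and the substituted result type coincide with what the rule emits; this reduces to re-associating nested universe/local/global substitutions with \Cref{lem:dt:interact} and naturality, together with the observation that the universe and global variables the rule binds do not occur in $\sigma$. I expect no conceptual difficulty here, only careful bookkeeping of exactly the kind already carried out in the proofs of \Cref{lem:dt:interact} and the plain Global Substitutions lemma.

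The one genuine subtlety — and the design point that makes the induction close — is that the lemma is deliberately left-biased: in every equivalence conclusion the ambient type carries $\sigma$, never $\sigma'$. As the Remark following \Cref{lem:dt:lsubst-eq} already explains for the local case, a symmetrically stated $\beta$-rule for $\TAPP t T$, or a flipped $\eta$-rule for $\CLAM l g{(\CAPP t g)}$, would at this point force us to equate $T'[T/U][\sigma]$ with $T'[T/U][\sigma']$ with no means to do so, since that would require a Global Substitutions lemma for a type whose well-formedness is not yet in hand. So the plan is to keep the bias uniformly, and in the single-premise $\eta$-cases to flip the equivalence before invoking the IH, mirroring exactly the strategy of \Cref{lem:dt:lsubst-eq}; I regard reconciling this bias across all the $\beta$/$\eta$-rules as the part that needs the most attention, even though each individual case is a routine calculation.
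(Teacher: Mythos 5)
Your overall architecture (mutual induction; lean on the plain Global Substitutions lemma, \Cref{lem:dt:lsubst-eq}, lifting, and the substitution-algebra lemmas; handle global-variable cases by unfolding $\ptyequiv{\sigma}{\sigma'}{\Phi}$, lifting, and composing through \Cref{lem:dt:lsubst-eq}) matches the paper's proof, which is a short one-paragraph appeal to exactly these ingredients. The case-by-case bookkeeping you describe for the $\beta$/$\eta$-rules is also fine.

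However, your diagnosis of what is subtle here is inverted, and one of your concrete claims is wrong. You assert that ``a symmetrically stated $\beta$-rule for $\TAPP t T$, or a flipped $\eta$-rule for $\CLAM l g{(\CAPP t g)}$, would at this point force us to equate $T'[T/U][\sigma]$ with $T'[T/U][\sigma']$ with no means to do so, since that would require a Global Substitutions lemma for a type whose well-formedness is not yet in hand.'' But by the time this lemma is stated, the plain Global Substitutions lemma \emph{is} in hand — that is precisely why the paper places this lemma after it. From $\lttypwf[\Phi, U : \DTyp[\Delta]\proglevel l][\Gamma]\metalevel{T'}{l'}$ and the local substitution lemma you get $\lttypwf[\Phi]{\metalevel}{T'[T/U]}{l'}$, and the IH (statement~3 of this very lemma) then yields $\lttypeq i{T'[T/U][\sigma]}{T'[T/U][\sigma']}{l'}$ directly, after which the conversion rule re-aligns the types. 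The paper's own proof makes exactly this observation: ``This lemma is much less sensitive to the exact statement of rules compared to \Cref{lem:dt:lsubst-eq}. Since now we have proved the global substitution lemma, we could use conversion rules whenever necessary.'' So the care about left-bias that was genuinely forced in \Cref{lem:dt:lsubst-eq} is \emph{not} the crux here; the availability of the plain Global Substitutions lemma dissolves it. Your extra caution does not invalidate the argument — it merely adds bookkeeping — but you should not present the left-bias as ``the one genuine subtlety,'' since the whole point of proving this lemma \emph{after} the plain one is that this subtlety disappears.
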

\begin{proof}
  We apply mutual induction. %
  This lemma is much less sensitive to the exact statement of rules compared to
  \Cref{lem:dt:lsubst-eq}. %
  Since now we have proved the global substitution lemma, we could use conversion
  rules whenever necessary.
\end{proof}

At this point, we have concluded that all substitutions are coherent with
well-formedness and typing judgments. %
Next, we shall move towards the full presupposition lemma and end our discussion on
syntactic properties with it.

\subsection{Context Equivalence and Presupposition}

In order to establish presupposition, we must concern ourselves with the asymmetry in
the congruence rules of the equivalence judgments. %
Presupposition, intuitively, requires us to show that this asymmetry ``does not
matter''. %
This intuition is formalized by the context equivalence lemma. %
In fact, we need two such lemmas, as we need to show one for local contexts and one
for global contexts. %
In light of that, let us proceed with the lemma for local contexts first.

\begin{lemma}[Local Context Equivalence]\labeledit{lem:dt:lctxeq1}$ $
  \begin{itemize}
  \item If $\lttypwf[\Psi][\Delta]i T l$ and $\lpequiv{\typeof i}\Delta\Gamma$, then
    $\lttypwf i{T} l$.
  \item If $\lttypeq[\Psi][\Delta]i T{T'} l$ and $\lpequiv{\typeof i}\Delta\Gamma$, then
    $\lttypeq i{T}{T'} l$.
  \item If $\lttyping[\Psi][\Delta]i t T l$ and $\lpequiv{\typeof i}\Delta\Gamma$, then
    $\lttyping i{t}{T} l$.
  \item If $\lttyequiv[\Psi][\Delta]i t{t'} T l$ and $\lpequiv{\typeof i}\Delta\Gamma$, then
    $\lttyequiv i{t}{t'}{T} l$.
  \item If $\ltsubst[\Psi][\Delta] i{\delta}{\Gamma'}$ and $\lpequiv{\typeof i}\Delta\Gamma$, then
    $\ltsubst i{\delta}{\Gamma'}$.
  \item If $\ltsubeq[\Psi][\Delta] i{\delta}{\delta'}{\Gamma'}$ and $\lpequiv{\typeof i}\Delta\Gamma$, then
    $\ltsubeq i{\delta}{\delta'}{\Gamma'}$.
  \end{itemize}
\end{lemma}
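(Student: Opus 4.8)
The plan is to prove all six statements simultaneously by a single mutual induction on the derivation appearing in each hypothesis, carrying the context-conversion assumption $\lpequiv{\typeof i}\Delta\Gamma$ along and adjusting it only when a rule descends under a binder. Note that neither $\lpjudge$ nor $\lpequiv$ is among the six conclusions, so whenever an inference rule has a premise of the form $\lpjudge{\typeof i}\Delta$ for the ambient local context I will not recurse on it but instead recover $\lpjudge{\typeof i}\Gamma$ directly from \Cref{lem:dt:presup1} applied to $\lpequiv{\typeof i}\Delta\Gamma$; premises that themselves assert an equivalence of local contexts only ever concern the inner context of a contextual type, which this lemma does not touch.

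For the bulk of the cases I would proceed as follows. Rules whose premises all live in the same ambient local context --- $\Nat$, the type and term encodings, the application forms, the non-binder congruence rules, the PER rules, and the conversion rules --- are handled by applying the induction hypotheses to the subderivations and re-applying the same rule over $\Gamma$; in the conversion rules the equivalence premise is itself one of the six judgments, so it transports by the induction hypothesis, while the universe-level side premise does not mention the local context at all. The global-variable rules are similar: the codomain context recorded in $\Psi$ is untouched, so only the local-substitution premise must be transported and $\lpjudge{\typeof i}\Gamma$ comes from partial presupposition. For rules that extend the ambient context with a fresh $x : S \at l$ --- the $\Pi$-type and $\lambda$ rules, $\telimn$, the $\tletbox$ rules, and the branches of the two recursive principles for code --- I first build $\lpequiv{\typeof i}{\Delta, x : S \at l}{\Gamma, x : S \at l}$ from $\lpequiv{\typeof i}\Delta\Gamma$ via the cons rule for local-context equivalence in its degenerate instance ($T = T'$, $l = l'$), using \Cref{lem:dt:lift} to raise the premise $\lttypwf[\Psi][\Delta]iSl$ to layer $\typeof i$, the induction hypothesis to obtain the same over $\Gamma$, and reflexivity of type and universe-level equivalence for the remaining components; then I recurse with this extended assumption. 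Extensions of the global context ($g : \Ctx$, $U$, $u$), which do not change the ambient local context, only require global weakening of the context-equivalence assumption, which we take for granted. In the computation rules the left-hand side, right-hand side, and result type are syntactic expressions built from the (unchanged) subterms, types, and contexts, so once every premise is transported the same rule re-derives the conclusion verbatim.

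The case that carries the real content is the local-variable rule and its congruence. Here I would first establish, by an easy induction on the derivation of $\lpequiv{\typeof i}\Delta\Gamma$, a shape-preservation fact: $\Delta$ and $\Gamma$ have the same base ($\cdot$ or the same $g$), the same length, the same variable names, and for every $x : T \at l \in \Delta$ there is a corresponding $x : T' \at{l'} \in \Gamma$ together with $\lttypeq{\typeof i}T{T'}l$ over the common prefix and $\tyequiv[L]l{l'}\Level$. Weakening this type equivalence up to the full context $\Gamma$ and then combining the variable rule over $\Gamma$ with the universe-level conversion rule (to pass from $l'$ to $l$) and the type-conversion rule (to pass from $T'$ to $T$) yields exactly $\lttyping i x T l$, and likewise $\lttyequiv i x x T l$ for the congruence. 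The same shape-preservation fact settles the base local-substitution rules $\cdot^k$, $\cdot_g^k$, and $\wk_g^k$, whose side conditions ``$\Gamma$ ends with $\cdot$ / with $g$'' and ``$|\Gamma| = k$'' are preserved verbatim.

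I expect the difficulty to be entirely in book-keeping rather than in any conceptual obstruction: there is no circularity, since the proof only consumes partial presupposition, lifting, reflexivity, weakening, and the conversion rules, all available earlier. The repetitive part is the ``extend the context equivalence, then recurse'' pattern for the many binder-introducing premises of the recursive-principle rules, where one must check each time that the type at which the new variable is bound is exactly the one the cons rule of local-context equivalence expects. The one genuine subtlety worth flagging is that the judgment being transported lives at layer $i$ while the context equivalence lives at layer $\typeof i$, which is precisely why \Cref{lem:dt:lift} is needed to move a binder's type-well-formedness premise into the context-equivalence rule.
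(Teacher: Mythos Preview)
Your proposal is correct and follows essentially the same approach as the paper: mutual induction on the derivations, with the variable case handled via the conversion rule using the type equivalence extracted from $\lpequiv{\typeof i}\Delta\Gamma$, and binder cases handled by extending the context equivalence via lifting and the step rule. Your treatment is more explicit than the paper's (particularly the shape-preservation argument for variables and the precise use of \Cref{lem:dt:lift} to bridge the $i$-vs-$\typeof i$ gap), but the underlying strategy is identical.
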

\begin{proof}
  We start by mutual induction. %
  The base case is the local variable cases, where we simply apply the conversion rule
  to the equivalence given by $\lpequiv{\typeof i}\Delta\Gamma$. %
  We might also use presupposition (\Cref{lem:dt:presup1}) to derive $\lpjudge{\typeof
    i}\Delta$. %
  Otherwise, most cases can be handled by IHs. %
  In cases where local contexts are extended with variables, we shall carefully use
  IHs to obtain the necessary premises to extend the equivalence of $\Delta$ and
  $\Gamma$ as well. %

  We consider a few cases:
  \begin{itemize}[label=Case]
  \item
    \begin{mathpar}
      \inferrule
      {\tyequiv[L]{l_1}{l_3}\Level  \\ \tyequiv[L]{l_2}{l_4}\Level \\ \lttypwf[\Psi][\Delta] i S{l_1} \\ \lttypeq[\Psi][\Delta] i
        S{S'}{l_1} \\
        \lttyequiv[\Psi][\Delta, x : S \at {l_1}] i t{t'}{T}{l_2}}
      {\lttyequiv[\Psi][\Delta] i{\LAM {l_1}{l_2} x S t}{\LAM {l_3}{l_4} x{S'}{t'}}{\PI l{l'} x S T}{l \sqcup l'}}
    \end{mathpar}
    In this case, the crucial part is to be able to invoke IH on $t \approx t'$. %
    We proceed as follows:
    \begin{align*}
      & \lttypwf i S{l_1}
      \byIH \\
      & \lpequiv{\typeof i}{\Delta, x : S \at {l_1}}{\Gamma, x : S \at {l_1}}
        \tag{by lifting and step case of the equivalence} \\
      & \lttyequiv[\Psi][\Gamma, x : S \at {l_1}] i t{t'}{T}{l_2}
        \byIH
    \end{align*}
    Then IHs will allow us to conclude the rest. 
  \item
    \begin{mathpar}
      \inferrule
      {\lpjudge \metalevel \Delta \\ \typing[L]{l'}\Level \\ \typing[L]l\Level \\ \lpjudge \proglevel{\Delta'} \\
        \lttypwf[\Psi][\Delta']\codelevel T l \\
        \lttypwf[\Psi][\Delta,x_T : \CTyp[\Delta']l \at{0}]\metalevel{M}{l'} \\
        \lttyping[\Psi, U : \DTyp[\Delta']\codelevel l][\Delta]\metalevel{t'}{M[\boxit U^\id/x_T]}{l'}}
      {\lttyequiv[\Psi][\Delta] \metalevel{t'[T/U]}{\LETBTYP{l'}l{\Delta'}{x_T.M}{U}{t'}{(\boxit T)}}{M[\boxit T/x_T]}{l'}}
    \end{mathpar}
    \begin{align*}
      & \lpjudge \metalevel \Gamma
        \tag{by presupposition (\Cref{lem:dt:presup1})} \\
      & \lpequiv \metalevel{\Delta,x_T : \CTyp[\Delta']l \at{0}}{\Gamma,x_T :
        \CTyp[\Delta']l \at{0}}
        \tag{note that well-formedness of $\CTyp[\Delta']l$ does not depend on
        $\Delta$ or $\Gamma$} \\
      & \lttypwf[\Psi][\Gamma,x_T : \CTyp[\Delta']l \at{0}]\metalevel{M}{l'}
        \byIH \\
      & \lpequiv[\Psi, U : \DTyp[\Delta']\codelevel l] \metalevel{\Delta}{\Gamma}
        \tag{by global weakening} \\
      & \lttyping[\Psi, U : \DTyp[\Delta']\codelevel l]\metalevel{t'}{M[\boxit U^\id/x_T]}{l'}
        \byIH \\
      & \lttyequiv \metalevel{t'[T/U]}{\LETBTYP{l'}l{\Delta'}{x_T.M}{U}{t'}{(\boxit T)}}{M[\boxit T/x_T]}{l'}
    \end{align*}
  \end{itemize}
\end{proof}
As a corollary, we can prove the following lemma.
\begin{lemma}[Symmetry and Transitivity of Local Contexts]$ $
  \begin{itemize}
  \item If $\lpequiv i \Gamma \Delta$, then $\lpequiv i \Delta \Gamma$.
  \item If $\lpequiv i{\Gamma_1}{\Gamma_2}$ and $\lpequiv i{\Gamma_2}{\Gamma_3}$, then
    $\lpequiv i{\Gamma_1}{\Gamma_3}$.
  \end{itemize}
\end{lemma}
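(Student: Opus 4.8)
The plan is to establish the two statements in order: symmetry first, by a direct structural induction, and then transitivity, by a structural induction that relies on the just-proven symmetry together with the Local Context Equivalence lemma (\Cref{lem:dt:lctxeq1}). In both inductions I would induct on the derivation of one of the hypotheses; since $\lpequiv i{\cdot}{\cdot}$ forces its two sides to have the same length and the same head variable names, the derivations proceed in lockstep and the case analyses line up automatically.

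For symmetry I would induct on $\lpequiv i \Gamma\Delta$. The empty and context-variable cases are immediate, as each carries its own copy of $\judge[L]\Psi$. In the extension case, with $\Gamma = \Gamma', x : T \at l$ and $\Delta = \Delta', x : T' \at{l'}$, the crucial point is that the rule already records the head type equivalence in \emph{both} contexts, plus the well-formedness of $T$ in $\Gamma'$ and of $T'$ in $\Delta'$, plus $\tyequiv[L]l{l'}\Level$. Rebuilding the rule with the roles of $\Gamma'$ and $\Delta'$ (and of $T,l$ versus $T',l'$) swapped then needs only the IH on $\lpequiv i \Gamma'\Delta'$, the PER rules and the type-level conversion rule to turn $\lttypeq[\Psi][\Delta'] i T{T'} l$ into $\lttypeq[\Psi][\Delta'] i {T'}{T}{l'}$ (and likewise in $\Gamma'$), and symmetry of universe-level equivalence. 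No appeal to \Cref{lem:dt:lctxeq1} is needed here.

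For transitivity I would induct on $\lpequiv i{\Gamma_1}{\Gamma_2}$ together with the matching inversion of $\lpequiv i{\Gamma_2}{\Gamma_3}$. The base cases are again immediate. In the extension case, writing $\Gamma_k = \Gamma_k', x : T_k \at{l_k}$, the IH gives $\lpequiv i{\Gamma_1'}{\Gamma_3'}$, and $\tyequiv[L]{l_1}{l_3}\Level$ follows by transitivity of universe-level equivalence; the premises $\lttypwf[\Psi][\Gamma_1'] i {T_1}{l_1}$ and $\lttypwf[\Psi][\Gamma_3'] i {T_3}{l_3}$ are inherited directly. What remains are the two head equivalences $\lttypeq[\Psi][\Gamma_1'] i {T_1}{T_3}{l_1}$ and $\lttypeq[\Psi][\Gamma_3'] i {T_1}{T_3}{l_1}$. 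For the first I have $\lttypeq[\Psi][\Gamma_1'] i {T_1}{T_2}{l_1}$ on the nose, but the companion fact $\lttypeq[\Psi][\Gamma_2'] i {T_2}{T_3}{l_2}$ lives in the wrong context, so I would transport it to $\Gamma_1'$ by \Cref{lem:dt:lctxeq1}, which asks for $\lpequiv{\typeof i}{\Gamma_2'}{\Gamma_1'}$; that I obtain from $\lpequiv i{\Gamma_1'}{\Gamma_2'}$ by symmetry (already proven) and \Cref{lem:dt:lift} (using monotonicity of $\ttypeof$), after which the conversion rule realigns the level to $l_1$ and transitivity of $\lttypeq$ finishes the goal. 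The second obligation is handled symmetrically, transporting $\lttypeq[\Psi][\Gamma_2'] i {T_1}{T_2}{l_1}$ into $\Gamma_3'$ via $\lpequiv{\typeof i}{\Gamma_2'}{\Gamma_3'}$.

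The hard part will be exactly this context-transport step in the transitivity case: the head type equivalences are recorded relative to the middle context $\Gamma_2'$ and must be relocated to $\Gamma_1'$ and to $\Gamma_3'$, which is why symmetry must be in hand first and why \Cref{lem:dt:lctxeq1,lem:dt:lift} are invoked. A secondary nuisance is threading the universe-level conversions (from $l_2$ back to $l_1$) through each transported equivalence so that the reconstructed extension rule is well-formed, and checking that the two input derivations genuinely end in matching rules at every step so that the lockstep induction is legitimate.
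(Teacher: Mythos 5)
Your proposal is correct and follows the same route as the paper: symmetry by direct induction on the derivation, transitivity by induction using the Local Context Equivalence lemma to transport the middle-context type equivalence into $\Gamma_1'$ and $\Gamma_3'$ (the paper's proof says exactly "Induction. Note that transitivity relies on the local context equivalence lemma."). Your expansion correctly identifies that the extension rule's redundant premises make symmetry self-contained, and that the transport step (via symmetry, lifting, and conversion on universe levels) is precisely where \Cref{lem:dt:lctxeq1} is needed.
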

\begin{proof}
  Induction. Note that transitivity replies on the local context equivalence lemma. 
\end{proof}
A similar lemma replaces the codomain local contexts of local substitutions. %
This variant is much simpler just by conversion rules.
\begin{lemma}[Local Context Conversion]\labeledit{lem:dt:lctxeq2} $ $
  \begin{itemize}
  \item If $\ltsubst i{\delta}{\Gamma'}$ and $\lpequiv{\typeof i}{\Gamma'}{\Delta}$, then
    $\ltsubst i{\delta}{\Delta}$.
  \item If $\ltsubeq i{\delta}{\delta'}{\Gamma'}$ and $\lpequiv{\typeof i}{\Gamma'}{\Delta}$, then
    $\ltsubeq i{\delta}{\delta'}{\Delta}$.
  \end{itemize}
\end{lemma}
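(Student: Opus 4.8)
The plan is to proceed by mutual induction on the derivations of $\ltsubst i\delta{\Gamma'}$ and $\ltsubeq i\delta{\delta'}{\Gamma'}$, inverting the hypothesis $\lpequiv{\typeof i}{\Gamma'}{\Delta}$ at each step. The equivalence judgment for local contexts is given by exactly three syntax-directed rules — the two base cases $\cdot\approx\cdot$ and $g\approx g$, and the cons rule — so inversion on it is immediate. For the three base cases of local substitutions, namely $\delta$ being $\cdot^k$, $\cdot^k_g$, or $\wk^k_g$, the codomain $\Gamma'$ is either $\cdot$ or a context variable $g$, and the matching equivalence rule forces $\Delta$ to be \emph{syntactically equal} to $\Gamma'$; hence the derivation we started with already witnesses the conclusion and there is nothing to prove.

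The only case that requires work is the cons case $\delta = \delta_0, t/x$ with $\Gamma' = \Gamma'_0, x : T \at l$ (and, in the second bullet, $\delta' = \delta'_0, t'/x$). Inverting $\lpequiv{\typeof i}{\Gamma'_0, x : T \at l}{\Delta}$ yields $\Delta = \Delta_0, x : T' \at{l'}$ together with $\lpequiv{\typeof i}{\Gamma'_0}{\Delta_0}$, $\lttypeq[\Psi][\Gamma'_0]{\typeof i}T{T'}l$, $\lttypwf[\Psi][\Delta_0]{\typeof i}{T'}{l'}$, and $\tyequiv[L]l{l'}\Level$. First I would apply the induction hypothesis to the premise $\ltsubst i{\delta_0}{\Gamma'_0}$ (resp. $\ltsubeq i{\delta_0}{\delta'_0}{\Gamma'_0}$) with $\lpequiv{\typeof i}{\Gamma'_0}{\Delta_0}$ to obtain $\ltsubst i{\delta_0}{\Delta_0}$ (resp. $\ltsubeq i{\delta_0}{\delta'_0}{\Delta_0}$). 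To re-fire the cons rule with the new codomain $\Delta_0, x : T' \at{l'}$ I must retype $t$ at $T'[\delta_0]$ and level $l'$: lifting $\ltsubst i{\delta_0}{\Gamma'_0}$ to layer $\typeof i$ via \Cref{lem:dt:lift} (using $i\le\typeof i$) and applying the local substitution lemma to $\lttypeq[\Psi][\Gamma'_0]{\typeof i}T{T'}l$ yields $\lttypeq{\typeof i}{T[\delta_0]}{T'[\delta_0]}l$ over the ambient domain; the conversion rule for terms then turns $\lttyping i t{T[\delta_0]}l$ into $\lttyping i t{T'[\delta_0]}l$, and the universe-level rule together with $\tyequiv[L]l{l'}\Level$ adjusts the level to $l'$. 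The remaining premise $\typing[L]{l'}\Level$ comes from presupposition on universe levels. Re-applying the cons rule closes the first bullet; for the second bullet the cons rule for $\ltsubeq$ additionally asks for $\lttyping i{t'}{T'[\delta_0]}{l'}$ and $\lttyequiv i t{t'}{T'[\delta_0]}{l'}$, both obtained from the original premises by the very same type conversion and level adjustment, now also invoking the conversion rule for term equivalence.

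The proof has no deep obstacle; the one thing to keep straight is the layer bookkeeping. The context equivalence is indexed by $\typeof i$ whereas the substitution is indexed by $i$, so before feeding the substitution into the local substitution lemma at the layer at which the type equivalence $T\approx T'$ lives, one has to lift it from $i$ to $\typeof i$ using \Cref{lem:dt:lift}. One should also check that the argument uses only the plain local substitution lemma, \Cref{lem:dt:lift}, presupposition on universe levels, and the conversion and universe-level rules, and in particular does not appeal to the equivalent-local-substitution lemma (\Cref{lem:dt:lsubst-eq}) or the global substitution lemma, so that no circularity is introduced. This is why, as announced just before the statement, the lemma is ``much simpler just by conversion rules''.
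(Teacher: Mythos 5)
Your proof is correct and takes essentially the same route as the paper, which compresses it into one line: induction, propagating conversion rules together with the local substitution lemma in the step case. Your elaboration — that the base cases force $\Delta = \Gamma'$ so nothing needs changing, and that in the cons case one lifts $\ltsubst i{\delta_0}{\Gamma'_0}$ to layer $\typeof i$ before feeding it to the local substitution lemma and then closes with the type- and level-conversion rules — is exactly the intended argument.
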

\begin{proof}
  By induction. Propagate conversion rules together with the local substitution lemma in the step case. 
\end{proof}

Then we work on the global context equivalence lemma. %
To state this lemma, we should first specify what does that mean for two global
contexts are equivalent.
\begin{mathpar}
  \inferrule
  { }
  {\gequiv\cdot\cdot}

  \inferrule
  {\gequiv\Psi\Phi}
  {\gequiv{\Psi, g : \Ctx}{\Phi, g : \Ctx}}

  \inferrule
  {\gequiv\Psi\Phi \\ \lpjudge \proglevel \Gamma \\ \lpjudge[\Phi] \proglevel\Delta \\ \lpequiv \proglevel \Gamma\Delta \\ \typing[L]l\Level \\ i \in \{\codelevel, \proglevel\}}
  {\gequiv{\Psi, U : \DTyp i l}{\Phi, U : \DTyp[\Delta] i l}}

  \inferrule
  {\gequiv\Psi \Phi 
    \\ \lpequiv \proglevel
    \Gamma\Delta \\ \lttypwf \proglevel T l \\ \lttypwf[\Phi][\Delta] \proglevel{T'} l \\ \lttypeq \proglevel T{T'} l \\ \typing[L]l\Level \\ i \in \{\varlevel, \codelevel\}}
  {\gequiv{\Psi, u : \DTrm i T l}{\Phi, u : \DTrm[\Delta] i{T'} l}}
\end{mathpar}
Essentially, the equivalence of global contexts are just point-wise equivalence of
types within. %
We can reconstruct the well-formedness of both components from the premises:
\begin{lemma}[Presupposition of Equivalence of Global Contexts]\labeledit{lem:dt:presup-gctx}
  If $\gequiv \Psi\Phi$, then $\judge[L]\Psi$ and $\judge[L]\Phi$. 
\end{lemma}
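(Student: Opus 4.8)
The plan is to prove this by a straightforward induction on the derivation of $\gequiv\Psi\Phi$, establishing $\judge[L]\Psi$ and $\judge[L]\Phi$ at the same time. The inductive predicate has exactly four introduction rules, and in each case the goal is obtained by applying the matching well-formedness rule for global contexts, feeding it the inductive hypothesis on the immediate prefixes together with the side conditions already recorded as premises of the equivalence rule.

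Concretely, in the base case $\Psi = \Phi = \cdot$ and the empty-context rule discharges both goals. In the case for $\gequiv{\Psi, g : \Ctx}{\Phi, g : \Ctx}$, the inductive hypothesis supplies $\judge[L]\Psi$ and $\judge[L]\Phi$, and one application of the context-variable formation rule on each side finishes the case. For the case introducing a type variable, with conclusion $\gequiv{\Psi, U : \DTyp i l}{\Phi, U : \DTyp[\Delta] i l}$, the premises of the equivalence rule already carry $\lpjudge \proglevel \Gamma$, $\lpjudge[\Phi] \proglevel \Delta$, $\typing[L]l\Level$ and $i \in \{\codelevel, \proglevel\}$; combined with $\judge[L]\Psi$ and $\judge[L]\Phi$ from the inductive hypothesis, these are precisely the premises of the formation rule for a type-variable binding, applied once for $\Psi$ with local context $\Gamma$ and once for $\Phi$ with local context $\Delta$. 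The case introducing a term variable, with conclusion $\gequiv{\Psi, u : \DTrm i T l}{\Phi, u : \DTrm[\Delta] i{T'} l}$, is entirely analogous: the equivalence rule records $\lttypwf \proglevel T l$, $\lttypwf[\Phi][\Delta] \proglevel{T'} l$, $\typing[L]l\Level$ and $i \in \{\varlevel, \codelevel\}$, which together with the inductive hypothesis match the premises of the formation rule for a term-variable binding.

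I do not expect a genuine obstacle. The equivalence rules for global contexts were deliberately stated so as to carry the well-formedness side conditions for \emph{both} the left and the right context, not merely the equivalences relating them; this is exactly what makes the present presupposition lemma a pure induction that avoids any appeal to substitution lemmas or to the context-equivalence lemmas such as \Cref{lem:dt:lctxeq1}. The only bookkeeping is to keep track of which global-context prefix each local context is checked against, but this is forced by the shape of the rules and needs no further argument.
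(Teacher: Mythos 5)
Your proof is correct and takes the expected route: a direct induction on the derivation of $\gequiv\Psi\Phi$, reading off the formation rule premises from the side conditions already built into the equivalence rules. The paper states this lemma without a proof, and your argument is exactly the one-liner the authors evidently had in mind; you are also right that the lemma is a pure induction precisely because the rules record well-formedness of \emph{both} contexts, which is the observation worth spelling out.
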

For the global context equivalence lemma, we would like to take a shortcut by taking
advantage of the global substitution lemma. %
\begin{lemma}\labeledit{lem:dt:lctxeq-id}
  If $\lpequiv{\typeof i}\Gamma \Delta$, then $\ltsubst i{\id}\Delta$. 
\end{lemma}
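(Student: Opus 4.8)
The plan is to derive the claim with no real induction of its own, by chaining two facts already established: that the identity local substitution is well-typed over its own domain, and that the codomain of a local substitution may be replaced by an equivalent local context (\Cref{lem:dt:lctxeq2}).

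First I would apply Partial Presupposition (\Cref{lem:dt:presup1}) to $\lpequiv{\typeof i}\Gamma\Delta$ to obtain $\lpjudge{\typeof i}\Gamma$. Since the domain and codomain contexts of a layer-$i$ local substitution both live at layer $\typeof i$, this is exactly the hypothesis required to invoke the corollary on well-typedness of the identity substitution, which gives $\ltsubst i{\id_{\Gamma}}{\Gamma}$. Next I would feed this judgment, together with the hypothesis $\lpequiv{\typeof i}\Gamma\Delta$, into \Cref{lem:dt:lctxeq2}, which replaces the codomain $\Gamma$ by the equivalent context $\Delta$; this yields $\ltsubst i{\id_{\Gamma}}{\Delta}$.

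It then remains only to observe that $\id_{\Gamma} = \id_{\Delta}$, so that the conclusion is literally $\ltsubst i{\id}{\Delta}$. This is immediate from the definitions of the weakening substitutions $\wk^k_{-}$ (and hence of $\id_{-} = \wk^0_{-}$): the recursion $\wk^k_{\cdot} := \cdot^k$, $\wk^k_{g} := \wk^k_g$, $\wk^k_{\Gamma, x : T \at l} := \wk^{1+k}_{\Gamma}, x/x$ never inspects the types or universe levels recorded in a context, so $\id_{\Gamma}$ depends only on the ``skeleton'' of $\Gamma$ — its base ($\cdot$ or a context variable $g$) together with the list of bound variable names — and the rules defining $\lpequiv{\typeof i}\Gamma\Delta$ force $\Gamma$ and $\Delta$ to share that skeleton.

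I do not expect a genuine obstacle here once \Cref{lem:dt:lctxeq2} is in place; the only points that need care are the layer bookkeeping ($\typeof i$ versus $i$) when citing the identity corollary and \Cref{lem:dt:lctxeq2}, and the syntactic coincidence $\id_{\Gamma} = \id_{\Delta}$. Were \Cref{lem:dt:lctxeq2} unavailable, the alternative would be a direct mutual induction on the derivation of $\lpequiv{\typeof i}\Gamma\Delta$, generalized to weakening substitutions $\wk^{|\Gamma''|}_{\Gamma'}$ over an extended domain $\Gamma', \Gamma''$, whose cons case discharges the newly appended component by the local variable rule followed by a conversion that transports the type-equivalence premise of the context-equivalence rule along the weakening via the local substitution lemma; but since \Cref{lem:dt:lctxeq2} already packages precisely this reasoning, that detour is unnecessary.
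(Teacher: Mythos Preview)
Your proposal is correct. The paper states this lemma without proof, and your argument---presupposition to obtain $\lpjudge{\typeof i}\Gamma$, then the identity-substitution lemma for $\ltsubst i{\id_\Gamma}{\Gamma}$, then \Cref{lem:dt:lctxeq2} to swap the codomain to $\Delta$, with the observation that $\id_\Gamma = \id_\Delta$ since $\wk^k_{-}$ ignores the types in a context---is exactly the intended route given the surrounding infrastructure.
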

\begin{lemma}\labeledit{lem:dt:gctxeq-id}
  If $\gequiv \Psi\Phi$, then $\ptyping{\id}\Phi$. 
\end{lemma}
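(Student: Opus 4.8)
The plan is to prove \Cref{lem:dt:gctxeq-id} by induction on the derivation of $\gequiv\Psi\Phi$, following the recursive shape of the global identity substitution $\id_\Psi = \wk^0_\Psi$. The base case $\gequiv\cdot\cdot$ is immediate: $\id_\cdot = \cdot$, and $L \sep \cdot \vdash \cdot : \cdot$ holds by the typing rule for the empty global substitution, using $\judge[L]\cdot$ (obtained from presupposition of equivalence of global contexts, \Cref{lem:dt:presup-gctx}). For each inductive step the context gains one binding $B$, and by definition $\id_{\Psi,B}$ has the form $\wk^1_\Psi,\, h/x$, where $h/x$ is $g/g$, $U^{\id_\Gamma}/U$, or $u^{\id_\Gamma}/u$ according to the kind of $B$, with $\Gamma$ the local context decorating that binding in $\Psi$. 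I would discharge the obligations componentwise.

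The tail obligation is always $L \sep \Psi, B \vdash \wk^1_\Psi : \Phi$. This follows from the induction hypothesis $L \sep \Psi \vdash \id_\Psi : \Phi$ by transporting along the global weakening $p(\id) : \Psi, B \To_g \Psi$ (using the global weakening lemma for substitutions), together with the routine identity $\id_\Psi[p(\id)] = \wk^1_\Psi$, which is of the same flavour as \Cref{lem:dt:wkact} and is proved by a small induction on $\Psi$. The head obligation is then handled per case. For $B = g : \Ctx$ one must show that $g$ is well-formed as a local context at layer $\proglevel$ in $\Psi, g:\Ctx$, which is immediate as $g : \Ctx$ is now in scope. For $B = U : \DTyp{i}{l}$ (so $i \in \{\codelevel, \proglevel\}$), writing $\Gamma$ for $\Psi$'s local context for $U$ and $\Delta$ for $\Phi$'s, with $\lpequiv\proglevel\Gamma\Delta$, one applies the formation rule for $U^\delta$ against the just-introduced binding $U : \DTyp[\Gamma]{i}{l}$; the residual premise asks for a well-typed $\id_\Gamma$ into $\Gamma$ with domain local context $\Delta[\wk^1_\Psi]$, which we obtain from reflexivity of $\id_\Gamma$ and the local-context-conversion lemmas \Cref{lem:dt:lctxeq1,lem:dt:lctxeq2}, noting that $\Delta[\wk^1_\Psi]$ is well-formed over $\Psi, U : \DTyp[\Gamma]{i}{l}$ by the induction hypothesis and the already-established global substitution lemma. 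The case $B = u : \DTrm{i}{T}{l}$ is analogous, using the typing rule for the term-level global variable $u^\delta$; here $\Phi$'s binding is $u : \DTrm[\Delta]{i}{T'}{l}$ with $\lttypeq\proglevel T{T'}l$, so one additionally inserts a conversion along $T \approx T'$ so that $u^{\id_\Gamma}$ has precisely the type that the codomain binding of $\Phi$ prescribes.

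The hard part will be the bookkeeping in the $U$ and $u$ cases: one must reconcile the ``internal'' local identity substitution $\id_\Gamma$, which is indexed by $\Psi$'s local context, with $\Phi$'s equivalent but syntactically different local context $\Delta$ --- all underneath the global weakening $\wk^1_\Psi$ --- and then check that the resulting type of the head component matches, up to conversion, the binding recorded in $\Phi$. This is where the earlier algebra is needed: the interaction lemmas for substitutions (\Cref{lem:dt:interact}), the action of substitutions on weakenings (\Cref{lem:dt:wkact}), the global substitution lemma, and \Cref{lem:dt:lctxeq1,lem:dt:lctxeq2}; it matters that all of these have been established before this point. One clean observation that makes the pieces fit is that the definition of $\gequiv$ forces $\Psi$ and $\Phi$ to have the same underlying shape, so $\id_\Psi$ and $\id_\Phi$ are literally the same raw syntax; this is also what will make $\id_\Psi$ act as an identity when \Cref{lem:dt:gctxeq-id} --- together with its local-level counterpart \Cref{lem:dt:lctxeq-id} --- is later fed through the global substitution lemma to establish the global context equivalence lemma.
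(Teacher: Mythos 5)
Your proposal is correct and takes essentially the same approach as the paper: induction on the derivation of $\gequiv\Psi\Phi$, with the tail discharged by weakening the induction hypothesis and the head reduced to typing a local identity substitution against the equivalent local context. The paper's terse proof invokes \Cref{lem:dt:lctxeq-id} as the packaged form of exactly the step you obtain from reflexivity of $\id_\Gamma$ plus \Cref{lem:dt:lctxeq1,lem:dt:lctxeq2}, and your added conversion along $T\approx T'$ in the $u$-case and the bookkeeping $\id_\Psi[p(\id)]=\wk^1_\Psi$ are the details the paper leaves implicit under ``weakening is used implicitly.''
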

\begin{proof}
  We proceed by induction. %
  In each step case, notice that weakening is used implicitly. %
  Use \Cref{lem:dt:lctxeq-id} to derive $\ltsubst \proglevel{\id}\Delta$ whenever necessary.
\end{proof}

\begin{lemma}[Global Context Equivalence]$ $
  \begin{itemize}
  \item If $\lpjudge[\Phi] i \Gamma$ and $\gequiv\Phi\Psi$, then $\lpjudge i {\Gamma}$.
  \item If $\lpequiv[\Phi] i \Gamma \Delta$ and $\gequiv\Phi\Psi$, then $\lpequiv i {\Gamma}{\Delta}$.
  \item If $\lttypwf[\Phi]i T l$ and $\gequiv\Phi\Psi$, then
    $\lttypwf[\Psi][\Gamma] i{T} l$.
  \item If $\lttypeq[\Phi]i T{T'} l$ and $\gequiv\Phi\Psi$, then
    $\lttypeq[\Psi][\Gamma] i{T}{T'} l$.
  \item If $\lttyping[\Phi]i t T l$ and $\gequiv\Phi\Psi$, then
    $\lttyping[\Psi][\Gamma] i{t}{T} l$.
  \item If $\lttyequiv[\Phi]i t{t'} T l$ and $\gequiv\Phi\Psi$, then
    $\lttyequiv[\Psi][\Gamma] i{t}{t'}{T} l$.
  \item If $\ltsubst[\Phi] i{\delta}{\Delta}$ and $\gequiv\Phi\Psi$, then
    $\ltsubst[\Psi][\Gamma] i{\delta}{\Delta}$.
  \item If $\ltsubeq[\Phi] i{\delta}{\delta'}{\Delta}$ and $\gequiv\Phi\Psi$, then
    $\ltsubeq[\Psi][\Gamma] i{\delta}{\delta'}{\Delta}$.
  \end{itemize}
\end{lemma}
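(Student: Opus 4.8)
The plan is to follow exactly the same shortcut strategy already used for \Cref{lem:dt:gctxeq-id}: reduce the statement to an instance of the (already established) global substitution lemma. The key observation is that if $\gequiv\Phi\Psi$, then by \Cref{lem:dt:gctxeq-id} we obtain $\ptyping{\id}\Phi$, but here $\id$ denotes the \emph{identity} global substitution read as going from $\Psi$ to $\Phi$ in the sense that it replaces each binding of $\Phi$ by the syntactically identical variable, which is now interpreted against $\Psi$. First I would make this precise: $\gequiv\Phi\Psi$ means $\Phi$ and $\Psi$ have the same shape and the same variables, differing only in the annotations (local contexts $\Delta$, types $T$, levels $l$) stored in the bindings, and these annotations are pairwise equivalent. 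Because all the stored data in $\Phi$ are at most at layer $\proglevel$ and static-code annotations are rigid, the identity substitution $\id$ (built from the variables of $\Phi$) is well-typed as a map into $\Psi$: each lookup $\id(g)=g$, $\id(U)=U^{\id}$, $\id(u)=u^{\id}$ typechecks against $\Psi$'s binding, using the conversion rules to bridge the equivalent-but-not-identical annotations, together with \Cref{lem:dt:lctxeq-id} to supply the inner $\ltsubst\proglevel{\id}\Delta$ premises. This is essentially the content of \Cref{lem:dt:gctxeq-id} and I would phrase the present lemma as: applying the global substitution lemma to $\id : \Psi \To \Phi$ turns any $\Phi$-judgment into the corresponding $\Psi$-judgment.

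Concretely, for each clause I would proceed as follows. Suppose $\lttyping[\Phi]i t T l$ and $\gequiv\Phi\Psi$. By \Cref{lem:dt:gctxeq-id}, $\ptyping{\id}\Phi$ (with $\id$ the identity-shaped substitution from $\Psi$). By the Global Substitutions lemma, $\lttyping[\Psi][\Gamma[\id]] i{t[\id]}{T[\id]}l$. It then remains to observe $t[\id]=t$, $T[\id]=T$, and $\Gamma[\id]=\Gamma$. These equalities are \emph{not} the plain identity law $t[\id_\Phi]=t$, because the $\id$ here is built from $\Psi$'s variable names which coincide with $\Phi$'s; but since the substitution is literally variable-for-variable (sending each global variable to itself applied to a local identity), its action on any syntactic object is the identity on the nose once one unfolds the definitions, exactly as in the algebra-of-global-substitutions lemmas. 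The cases for well-formedness of types, type equivalence, term equivalence, local substitutions, and their equivalences are all discharged identically: apply the matching clause of the Global Substitutions lemma to $\id$, then simplify $[\id]$ away. The clauses for $\lpjudge i \Gamma$ and $\lpequiv i \Gamma\Delta$ follow the same way, restricting to $i\in\{\proglevel,\metalevel\}$ when the Global Substitutions lemma demands it; for $i\in\{\varlevel,\codelevel\}$ one first lifts to $\proglevel$ via \Cref{lem:dt:lift}, applies the result, and if needed unlifts, or one simply notes that those local-context judgments carry no computational content.

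The main obstacle I anticipate is making the passage $t[\id]=t$ (and similarly for $\Gamma$, $\delta$, $T$) fully rigorous, since the $\id$ used here lives between two \emph{distinct} global contexts $\Psi$ and $\Phi$ rather than being $\id_\Psi$ in the strict sense. One must check that the definition of the identity global substitution $\wk^0_\Psi$ depends only on the variable names and lengths of $\Psi$, not on the annotations — which it does, by inspection of the definition of $\wk^k_{\cdot}$ — so that the same syntactic object serves as ``$\id$'' for $\Phi$ as well, and its action is genuinely the identity. A secondary subtlety, already flagged in the remark after \Cref{lem:dt:lsubst-eq}, is the left-biasedness of the equivalence judgments: in the equivalence clauses we must be careful that the annotation produced by substituting along $\id$ matches the left-hand annotation expected in the goal; but since $\id$ acts as the identity this bias is harmless. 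Finally, the well-formedness of $\ptyping{\id}\Phi$ itself (i.e.\ \Cref{lem:dt:gctxeq-id}) has already been proven, so no circularity arises: the Global Substitutions lemma was established before this subsection, and \Cref{lem:dt:gctxeq-id} depends only on \Cref{lem:dt:lctxeq-id} and conversion, both available. Hence the proof is a routine induction packaging these observations, clause by clause.
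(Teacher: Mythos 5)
Your proposal is correct and takes essentially the same approach as the paper: obtain $\ptyping{\id}{\Phi}$ from \Cref{lem:dt:gctxeq-id}, apply the Global Substitutions lemma, and conclude by the fact that $\id$ acts as the identity. You spell out more of the bookkeeping (why the same syntactic $\id$ serves for both $\Phi$ and $\Psi$, the left-bias remark) than the paper's two-line proof, but the underlying argument is the same.
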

\begin{proof}
  We have $\ptyping\id\Phi$ due to \Cref{lem:dt:gctxeq-id}. %
  Then by the global substitution lemma, we have our goal by knowing that a global
  identity substitution $\id$ does no action. 
\end{proof}

Finally, we prove the presupposition lemma, which is the last guiding lemma of the
layering principle.
\begin{lemma}[Presupposition] $ $
  \begin{itemize}
  \item If $\lttypwf i T l$, then $\lpjudge{\typeof i}\Gamma$ and $\typing[L]l\Level$ or $i = \metalevel \wedge l = \omega$. 
  \item If $\lttypeq i T{T'} l$, then $\lpjudge{\typeof i}\Gamma$, $\lttypwf i T l$,
    $\lttypwf i{T'}l$ and $\typing[L]l\Level$ or $i = \metalevel \wedge l = \omega$.
  \item If $\lttyping i t T l$, then $\lpjudge{\typeof i}\Gamma$, \newline
    $\lttypwf{\typeof i} T l$ and $\typing[L]l\Level$ or $i = \metalevel \wedge l = \omega$.
  \item If $\lttyequiv i t{t'} T l$, then $\lpjudge{\typeof i}\Gamma$, $\lttyping i t
    T l$, $\lttyping i{t'} T l$, $\lttypwf{\typeof i} T l$
     and $\typing[L]l\Level$ or $i = \metalevel \wedge l = \omega$.
  \item If $\ltsubst i{\delta}{\Delta}$, then $\lpjudge{\typeof i}\Delta$.
  \item If $\ltsubeq i{\delta}{\delta'}{\Gamma'}$, then $\lpjudge{\typeof i}\Delta$.
  \end{itemize}  
\end{lemma}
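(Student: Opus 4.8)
The plan is to prove all six statements simultaneously by mutual induction on the given derivations, leaning on the substitution and context‑equivalence lemmas already in place. The local‑context well‑formedness conclusions are not restated because \Cref{lem:dt:presup1} already supplies them, and the symmetry/transitivity of the various equivalences are also available, so the PER closure rules are immediate. For statements 5 and 6, in the step case $\ltsubst i{\delta,t/x}{\Delta,x:T\at l}$ the IH gives $\lpjudge{\typeof i}\Delta$, statement 1 applied to the type premise at layer $\typeof i$ (together with idempotence of $\ttypeof$) gives $\typing[L]l\Level$, and the formation rule for $\lpjudge{}{}$ reassembles the goal; the other base cases are handled directly through \Cref{lem:dt:presup1}.

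For the introduction and congruence rules the IHs essentially hand us what we need; the only friction is the asymmetry of the congruence rules for equivalence. Consider the congruence rule for $\LAM l{l'}xSt$: the body equivalence is typed in $\Gamma,x:S\at{l_1}$ while the right‑hand abstraction uses $S'$. To obtain $\lttyping i{\LAM{l_3}{l_4}x{S'}{t'}}{\PI l{l'}xST}{l_1\sqcup l_2}$ I would first apply the IH to $\lttypeq i S{S'}{l_1}$ to get $\lttypwf i{S'}{l_1}$, then invoke \Cref{lem:dt:lctxeq1} (with \Cref{lem:dt:lift} and the step case of $\lpequiv i \Gamma \Delta$) to transport the body equivalence into $\Gamma,x:S'\at{l_1}$, and finally reassemble with the typing rule plus a conversion to match the advertised $\Pi$‑type. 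The same recipe — apply IH, repair the local context via \Cref{lem:dt:lctxeq1}, reassemble with the relevant congruence/typing rule and a conversion — handles every congruence rule, including those for $\tletbox$ and the recursive principles, where \Cref{lem:dt:presup-gctx} and global‑weakening instances move us between extended global contexts. The conversion rules are discharged directly: presupposition of the type‑equivalence premise (statement 2 at layer $\typeof i$) gives well‑formedness of the converted type, and the level‑subsumption rules reduce to the earlier inductions on $\typing[L]l\Level$.

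The laborious cases are the computation rules, where both sides must be shown well‑typed at the stated type. For the function $\beta$ rule $\lttyequiv i{t[s/x]}{\APP{\LAM{l}{l'}xSt}{l}{l'}xSTs}{T[s/x]}{l'}$, the abstraction‑application side is by the typing rules and $\lttyping i{t[s/x]}{T[s/x]}{l'}$ follows from the local substitution lemma applied to $\id_\Gamma,s/x$. The $\tletbox$ and recursive‑principle $\beta$ rules follow the same pattern but additionally require the naturality identities for substitutions and \Cref{lem:dt:interact} to line the types up; these are exactly the computations already performed in the proofs of the substitution lemmas, so no new ideas are needed. The $\eta$ rules are handled by building the $\eta$‑expanded term with the typing rules and an implicit weakening, and the universe‑polymorphic cases $\UPI\ell lT$, $\ULAM l\ell t$, $\UAPP t l$ are where the disjunct $i=\metalevel\wedge l=\omega$ is produced.

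I expect the main obstacle to be \emph{bookkeeping} rather than mathematics: keeping the layers straight so that $\ttypeof$ is applied consistently — using monotonicity of $\ttypeof$ together with \Cref{lem:dt:lift} and \Cref{lem:dt:static} whenever code must be moved between layers $\varlevel$, $\codelevel$, $\proglevel$ — and making sure every branch discharges the disjunction "$\typing[L]l\Level$ or $i=\metalevel\wedge l=\omega$". The left‑biased formulation of the equivalence judgments (see the remark after \Cref{lem:dt:lsubst-eq}) is precisely what lets the asymmetric‑congruence arguments close without circularly invoking the global substitution lemma, and I would rely on that bias throughout; the recursive‑principle congruence and $\beta$ rules, with their enormous premise set $G_A$, are where the sheer volume of routine verification concentrates.
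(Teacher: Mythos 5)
Your overall strategy (mutual induction, fall back on the already-proved substitution and local-context-equivalence lemmas, and track the layer disjunction carefully) matches the paper's. However, there is a gap in the mechanism you use to resolve the asymmetric congruence cases, and a mis-reading of the remark after \Cref{lem:dt:lsubst-eq} that would lead you into trouble.

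Your recipe — ``apply IH, repair the local context via \Cref{lem:dt:lctxeq1}, reassemble with a conversion'' — works for cases like the $\lambda$ congruence, where the asymmetry lives in the \emph{local context} ($x:S$ vs.~$x:S'$) and a $\Pi$-type congruence closes the conversion. But it does not cover cases where the asymmetry surfaces as a \emph{substituted result type}. The paper's worked example is exactly such a case: for the congruence of $\TAPP t T$ vs.~$\TAPP{t'}{T'}$, the conclusion type is $T''[T/U]$, while the direct typing of the right-hand side yields $T''[T'/U]$. No local-context repair is available there — you genuinely need $\lttypeq \metalevel{T''[T/U]}{T''[T'/U]}{l'}$, and the only way to get it is the Equivalent Global Substitutions lemma (\Cref{lem:dt:gsubst-eq}), feeding it the fact that $T/U$ and $T'/U$ are pointwise-equivalent global substitutions. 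The same issue reappears in the $\tletbox$ congruences (motive $M$ vs.~$M'$, scrutinee $t$ vs.~$t'$, with conclusion type $M[t/x_T]$) and in the recursors, where \Cref{lem:dt:lsubst-eq} is needed for the analogous local-substitution version. These two lemmas are precisely what the paper's proof singles out, and your proposal never invokes them.

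Related to this, your remark that you would ``rely on that bias throughout'' to ``close without circularly invoking the global substitution lemma'' misapplies the observation after \Cref{lem:dt:lsubst-eq}. The left bias is designed so that \Cref{lem:dt:lsubst-eq} \emph{itself} can be proved before the global substitution lemma exists. By the time the presupposition lemma is stated, the global substitution lemma and its equivalence variant \Cref{lem:dt:gsubst-eq} are already in hand, and there is nothing circular about using them — indeed you must. Relying on the bias alone, as you propose, leaves the $\TAPP$ congruence (and the $\tletbox$/recursor congruences) unclosable.
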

Notice that in the statement of the lemma, we sometimes conclude $i = \metalevel \wedge l =
\omega$. %
The only occasion when $\omega$ is used is when universe polymorphic functions are
involved. %
In that case, we know for sure that $i = \metalevel$. %
In any other cases, we obtain $\typing[L]l\Level$, which excludes $l = \omega$.
\begin{proof}
  We proceed by a mutual induction. %
  In certain congruence rules, we must apply
  \Cref{lem:dt:lsubst-eq,lem:dt:gsubst-eq} to resolve the asymmetry in the rules. %
  Otherwise, we simply apply the substitution lemmas whenever necessary. %
  Note that our rules are stated with redundant premises to make sure this lemma
  eventually checks out. %
  \begin{mathpar}
    \inferrule
    {\lttyequiv \metalevel{t}{t'}{\TPI U[\Delta] l{l'}{T''}}{l'} \\ \lttypeq[\Psi][\Delta] \proglevel{T}{T'}l}
    {\lttyequiv \metalevel{\TAPP t{T}}{\TAPP{t'}{T'}}{T''[T/U]}{l'}}
  \end{mathpar}
  \begin{align*}
    & \lttyping \metalevel{t}{\TPI U[\Delta] l{l'}{T''}}{l'}
      \byIH \\
    & \lttyping \metalevel{t'}{\TPI U[\Delta] l{l'}{T''}}{l'}
      \byIH \\
    & \lttypwf[\Psi][\Delta] \proglevel{T}l
      \byIH \\
    & \lttypwf[\Psi][\Delta] \proglevel{T'}l
      \byIH \\
    & \lttyping \metalevel{\TAPP{t'}{T'}}{T''[T'/U]}{l'} \\
    & \lttypwf \metalevel{\TPI U[\Delta] l{l'}{T''}}{l'}
      \byIH \\
    & \lttypwf[\Psi, U : \DTyp[\Delta] \proglevel l] \metalevel{T''}{l'}
      \tag{by inversion}
  \end{align*}
  Notice that
  \[
    T/U \approx T'/U
  \]
  We then have the goal by \Cref{lem:dt:gsubst-eq}. 
\end{proof}

\subsection{Coverage and Progress of Recursive Principles}\labeledit{sec:dt:syn:cov}

Before moving to the semantics, let us pause a second and think about the recursive
principles: is it guaranteed to always pick a case from the branches? %
In this section, we would like to positively answer this question. %
The ingredient lies in the typing judgment at layer $\codelevel$ and how the recursive
principle is formulated. %
Recall that the recursive principle for code of terms is
\begin{mathpar}
  \inferrule
  {G_A \\ \typing[L]{l'}\Level \\ \lpjudge \proglevel \Delta \\ \lttypwf[\Psi][\Delta]\proglevel T{l'} \\
    \lttyping \metalevel t{\CTrm[\Delta]T{l'}}{0}}
  {\lttyping \metalevel {\ELIMTRM{l_1}{l_2}M b{l'}\Delta T t}{M_\Trm[l'/\ell,\Delta/g,T/U_T,t/x_t]}{l_2}}
\end{mathpar}
In this rule, we see that the type of $t$ is indexed by $l$, $\Delta$ and $T$, both of
which live at layer $\proglevel$. %
When $t = \boxit{t'}$, then $t'$ must be typed at layer $\codelevel$. %
Then coverage is provided by the exhaustiveness of the branches which should enumerate
all possible types and terms at layer $\codelevel$. %
This is simple as we simply check the syntax at layer $\codelevel$ and can confirm that the
branches are indeed exhaustive. %
Progress, on the other hand, requires both $\Delta$ and $T$ are in the right form
prescribed by the $\beta$ rules. %
For example, the following rule gives the $\beta$ rule for the $\Nat$ case as a term:
\begin{mathpar}
  \inferrule
  {G_A \\ \lpjudge \metalevel \Gamma \\ \lpjudge \proglevel \Delta}
  {\lttyequiv \metalevel {t'_\Nat[\Delta/g]}{\ELIMTRM{l_1}{l_2}M b{1}\Delta{\Ty 0}{(\boxit
        \Nat)}}{M_\Trm[1/\ell,\Delta/g,\Ty 0/U_T,\boxit \Nat/x_t]}{l_2}}
\end{mathpar}
where $T = \Ty 0$ is required. %
In this section, we show that when $t= \boxit{t'}$ where $t'$ is of some concrete form
prescribed by a $\beta$ rule, then the indices must have the right form. %

We first consider the well-formed types at layer $\codelevel$:
\begin{lemma}
  If $\lttypwf \codelevel \Nat l$, then $\tyequiv[L]{l}{0}\Level$. 
\end{lemma}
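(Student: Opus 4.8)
The plan is to prove the lemma by induction on the derivation of $\lttypwf \codelevel \Nat l$. The essential observation is that only two formation rules can have the syntactic type $\Nat$ as their subject: the formation rule for $\Nat$ itself, which fixes the universe level to be literally $0$, and the universe-level conversion rule, which turns $\lttypwf \codelevel \Nat {l'}$ into $\lttypwf \codelevel \Nat l$ whenever $\tyequiv[L]{l}{l'}{\Level}$. All the other type formers — $\Pi$-types, $\tEl$-decodings $\Elt{l}{t}$, the universes $\Ty{l}$, universe-polymorphic function types, the global-variable types $U^\delta$, and the contextual/meta-function types — are syntactically distinct from $\Nat$, so their formation rules cannot conclude $\lttypwf \codelevel \Nat l$; and since layer $\codelevel$ admits no computation, there is no rule that rewrites the subject of a well-formedness judgment. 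Hence the case analysis is exhaustive with exactly these two rules.

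In the base case, the formation rule for $\Nat$ gives $l$ literally equal to $0$, and $\tyequiv[L]{0}{0}{\Level}$ follows from the reflexivity PER rule for universe levels together with the axiom $\typing[L]{0}{\Level}$. In the conversion case we have a subderivation $\lttypwf \codelevel \Nat {l'}$ with $\tyequiv[L]{l}{l'}{\Level}$; the induction hypothesis yields $\tyequiv[L]{l'}{0}{\Level}$, and transitivity of universe-level equivalence gives $\tyequiv[L]{l}{0}{\Level}$, as required.

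There is no genuine obstacle here; the only point that needs care is the exhaustiveness of the case analysis — the claim that no congruence rule, no computation rule, and no rule introducing $\tEl$ can conclude $\lttypwf \codelevel \Nat l$. This holds because well-formedness of types is generated purely by the type-formation rules together with universe-level conversion, and among those rules the plain MLTT formation rule for $\Nat$ is the only one whose subject is $\Nat$, regardless of the layer. This lemma (and its analogues for the remaining layer-$\codelevel$ type and term formers, proved the same way) feeds the progress argument for the recursive principles sketched above: when a scrutinee $\boxit{t'}$ has a concrete shape prescribed by a $\beta$ rule, the type indices $T$ and $\Delta$ are forced — up to the quotient by universe-level equivalence — into exactly the form that $\beta$ rule expects.
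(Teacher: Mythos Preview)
Your proposal is correct and matches the paper's proof exactly: the paper also proceeds by induction on the derivation and notes that the only applicable rules are the $\Nat$ well-formedness rule and the universe-level conversion rule. Your write-up simply spells out in more detail what the paper leaves implicit (the reflexivity and transitivity steps for universe levels, and the exhaustiveness of the case analysis).
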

\begin{proof}
  By induction. The only applicable rules are the well-formedness rule and the
  conversion rule. %
\end{proof}
Similar lemmas can be stated and proved.
\begin{lemma}
  If $\lttypwf \codelevel{\PI l{l'}x S T}{l''}$, then $\lttypwf \codelevel S{l}$ and
  $\lttypwf[\Psi][\Gamma, x : S \at l] \codelevel T{l'}$ and they are sub-derivations of the
  assumption; moreover, $\tyequiv[L]{l''}{l \sqcup l'}\Level$. 
\end{lemma}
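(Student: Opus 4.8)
The plan is to prove this by induction on the derivation of $\lttypwf \codelevel{\PI l{l'}x S T}{l''}$. The first step is to observe that, since the subject $\PI l{l'}x S T$ is syntactically a $\Pi$-type, only two rules from \Cref{sec:dt:mltt-rules} can have this conclusion at layer $\codelevel$: the $\Pi$-formation rule, with premises $\typing[L]l\Level$, $\typing[L]{l'}\Level$, $\lttypwf \codelevel S l$ and $\lttypwf[\Psi][\Gamma, x : S \at l] \codelevel T{l'}$ and conclusion universe level $l'' = l \sqcup l'$; and the universe-level conversion rule for type well-formedness, which derives $\lttypwf \codelevel{\PI l{l'}x S T}{l''}$ from $\lttypwf \codelevel{\PI l{l'}x S T}{l'''}$ together with $\tyequiv[L]{l''}{l'''}\Level$. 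None of the remaining rules can apply: the rules forming $\Elt l t$, $\UPI \ell l T$, and the term-level encodings of types all have conclusions of a different syntactic shape; the congruence and computation rules produce equivalence judgments rather than well-formedness judgments; and layer $\varlevel$ contributes no type-formation rule.

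In the formation case the two claimed derivations $\lttypwf \codelevel S l$ and $\lttypwf[\Psi][\Gamma, x : S \at l] \codelevel T{l'}$ are literally premises of the rule, hence sub-derivations, and $\tyequiv[L]{l''}{l \sqcup l'}\Level$ follows by reflexivity since $l'' = l \sqcup l'$ and $\typing[L]{l \sqcup l'}\Level$ is derivable from the level premises. In the conversion case the induction hypothesis applied to the premise $\lttypwf \codelevel{\PI l{l'}x S T}{l'''}$ produces $\lttypwf \codelevel S l$ and $\lttypwf[\Psi][\Gamma, x : S \at l] \codelevel T{l'}$ as sub-derivations of that premise --- and therefore sub-derivations of the whole derivation --- together with $\tyequiv[L]{l'''}{l \sqcup l'}\Level$; composing with $\tyequiv[L]{l''}{l'''}\Level$ via symmetry and transitivity of universe-level equivalence gives the required $\tyequiv[L]{l''}{l \sqcup l'}\Level$.

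The argument is essentially routine; the only subtlety is keeping track of the \emph{sub-derivation} claim, which is precisely what makes the recursion in the coverage-and-progress argument of \Cref{sec:dt:syn:cov} well-founded, since each recursive invocation of the recursive principle then descends to a strictly smaller derivation of a type-formation judgment for a proper sub-structure ($S$ or $T$). The main (mild) obstacle is therefore just confirming that the case analysis on the last rule is genuinely exhaustive for the layer-$\codelevel$ fragment, i.e. that no decoding computation rule or term-encoding rule can masquerade as $\Pi$-type formation; this is discharged by inspection of the syntactic form of the conclusions of the rules in \Cref{sec:dt:mltt-rules,sec:dt:modal}.
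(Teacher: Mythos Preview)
Your proof is correct and follows the same approach as the paper, which records only the single word ``Induction.'' You have simply fleshed out the case analysis and the handling of the level-conversion rule that the paper leaves implicit.
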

\begin{proof}
  Induction. 
\end{proof}
That the judgments in the conclusion are sub-derivations ensures the well-foundedness
of the recursion. %
Effectively, the recursive principles recurse on the structures of the typing
derivations, so they are the most general principles that can be formulated on top the
syntax of MLTT.
\begin{lemma}
  If $\lttypwf \codelevel{\Elt l t}{l'}$, then $\lttyping \codelevel t{\Ty l}{1 + l}$ as a
  sub-derivation and $\tyequiv[L]{l'}{l}\Level$. 
\end{lemma}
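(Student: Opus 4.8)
The plan is to argue by induction on the derivation of $\lttypwf \codelevel{\Elt l t}{l'}$, exactly as in the two preceding inversion lemmas. The crux is that a derivation whose conclusion has the shape $\lttypwf \codelevel{\Elt l t}{l'}$ can only end with one of two rules: the formation rule for $\tEl$, or the universe-level conversion rule, which from $\lttypwf i{\Elt l t}{l'}$ and $\tyequiv[L]{l''}{l'}\Level$ concludes $\lttypwf i{\Elt l t}{l''}$. No type-equivalence or decoding (computation) rule is relevant, since those belong to the type-equivalence judgment rather than to well-formedness, so this case split is genuinely exhaustive.

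First I would handle the base case, where the derivation ends with the $\tEl$ formation rule. Its premises are $\typing[L]l\Level$ and $\lttyping \codelevel t{\Ty l}{1 + l}$, and its conclusion forces the ambient level to be exactly $l$, i.e. $l' = l$. The second premise is precisely the claimed typing judgment and is literally a sub-derivation of the assumption, and $\tyequiv[L]{l'}{l}\Level$ is then reflexivity, justified by the first premise. For the inductive case, the derivation ends with the level-conversion rule, giving a strictly shorter derivation of $\lttypwf \codelevel{\Elt l t}{l''}$ together with $\tyequiv[L]{l'}{l''}\Level$; crucially, the conversion rule leaves the type $\Elt l t$—and in particular its subscript $l$—untouched. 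Applying the induction hypothesis to the shorter derivation yields $\lttyping \codelevel t{\Ty l}{1 + l}$ as a sub-derivation of it (hence a fortiori of the original) together with $\tyequiv[L]{l''}{l}\Level$; transitivity of the equivalence on universe levels then delivers $\tyequiv[L]{l'}{l}\Level$.

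I expect no real obstacle here; the one point worth stating explicitly is that the ``sub-derivation'' clause—which is what guarantees well-foundedness of the recursive principles, as noted after the earlier lemmas—really is preserved through the conversion step. This holds because each application of the level-conversion rule merely wraps an existing derivation without enlarging the inner $\tEl$-formation derivation that supplies the typing of $t$, so the typing of $t$ picked out in the base case remains a structural sub-derivation of the whole however many conversions are stacked above it. As with the previous two lemmas, nothing in the argument uses $i = \codelevel$ specifically, so the same reasoning establishes the analogue at any layer.
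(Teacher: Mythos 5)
Your proposal is correct and matches the paper's approach: the paper proves the preceding lemma in this series ``by induction,'' noting that ``the only applicable rules are the well-formedness rule and the conversion rule,'' and leaves this one implicit as following the identical pattern. Your case split on the $\tEl$-formation rule and the universe-level conversion rule is exhaustive, the base case and transitivity step are right, and your observation that level conversion preserves the sub-derivation property is exactly the point the paper emphasizes in the surrounding text.
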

\begin{lemma}
  If $\lttypwf \codelevel{\Ty l}{l'}$, then $\tyequiv[L]{l'}{1 + l}\Level$.
\end{lemma}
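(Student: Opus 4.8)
The plan is to prove this inversion fact by induction on the derivation of $\lttypwf \codelevel{\Ty l}{l'}$, exactly mirroring the three preceding inversion lemmas for $\Nat$, $\PI{}{}xST$, and $\Elt l t$. The key observation is that at layer $\codelevel$ only two rules have a conclusion of the shape $\lttypwf \codelevel{\Ty l}{l'}$ for the fixed type constructor $\Ty l$: the introduction rule
\begin{mathpar}
  \inferrule
  {\lpjudge{\typeof \codelevel} \Gamma \\ \typing[L]l\Level}
  {\lttypingd \codelevel{\Ty l}{1 + l}{2 + l}}
\end{mathpar}
(available at $\codelevel$ — the layer-$\varlevel$ restriction does not touch it) and the universe-level conversion rule for types. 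No decoding or computation rule produces a \emph{well-formedness} judgment whose subject is the syntactic form $\Ty l$; those rules (e.g.\ $\lttypeq \codelevel{\Ty l}{\Elt{1+l}{\Ty l}}{1+l}$) yield only type equivalences, and every other well-formedness rule targets a different constructor or a different layer.

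In the base case, the introduction rule forces $l' = 1 + l$ syntactically, and its premise $\typing[L]l\Level$ together with the successor rule gives $\typing[L]{1 + l}\Level$; reflexivity of universe-level equivalence then yields $\tyequiv[L]{l'}{1 + l}\Level$. In the inductive case the last rule is
\begin{mathpar}
  \inferrule
  {\lttypwf \codelevel{\Ty l}{l''} \\ \tyequiv[L]{l''}{l'}\Level}
  {\lttypwf \codelevel{\Ty l}{l'}}
\end{mathpar}
and the induction hypothesis applied to the first premise gives $\tyequiv[L]{l''}{1 + l}\Level$; composing this with the second premise via symmetry and then transitivity of universe-level equivalence (both already established in \Cref{sec:dt:ulevel}) gives $\tyequiv[L]{l'}{1 + l}\Level$.

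I do not expect any genuine obstacle here: this is a direct structural inversion, identical in shape to the cases just above, and the Presupposition lemma for universe levels supplies any side conditions needed to close the goal. The only subtlety worth flagging is that throughout this section ``equality'' of universe levels is understood modulo the equivalence $\approx$ (cf.\ the Static Code lemma), which is precisely why the conclusion is phrased as $\tyequiv[L]{l'}{1 + l}\Level$ rather than a syntactic identity; we therefore freely use the PER laws for the judgment $\tyequiv[L]{l_1}{l_2}\Level$ when combining the hypotheses.
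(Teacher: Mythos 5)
Your proof is correct and matches the paper's approach, which for these inversion lemmas simply says "by induction" with the observation that only the introduction rule and the universe-level conversion rule apply. Your identification of the two applicable rules, the base-case reading of $l' = 1 + l$, and the composition of the IH with the side equivalence via symmetry and transitivity are all exactly what is intended.
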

Now we have exhausted all possible cases for types, so we move on to terms.

\begin{lemma}
  If $\lttyping \codelevel x T l$, then $x : T' \at{l'} \in \Gamma$ and $\lttypeq \proglevel{T}{T'}{l}$
  and $\tyequiv[L]{l}{l'}\Level$. 
\end{lemma}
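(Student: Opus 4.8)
The plan is to proceed by induction on the derivation of $\lttyping \codelevel x T l$. Since the subject is a bare local variable $x$, only three rules can have concluded this judgment: the local-variable rule, the type-conversion rule (premises $\lttyping \codelevel x{T''}l$ and $\lttypeq \proglevel T{T''}l$, using $\typeof\codelevel = \proglevel$), and the universe-level conversion rule (premises $\lttyping \codelevel x T{l''}$ and $\tyequiv[L]l{l''}\Level$). No congruence rule, no global-variable rule, and nothing introduced at layer $\metalevel$ applies to a plain variable, so the induction genuinely has exactly one base case and two inductive steps. In the base case the premise already supplies $x : T \at l \in \Gamma$, so we take $T' := T$ and $l' := l$; the presupposition lemma applied to $\lttyping \codelevel x T l$ yields $\lttypwf{\typeof\codelevel}T l = \lttypwf\proglevel T l$ and $\typing[L]l\Level$ (the $l = \omega$ disjunct is excluded because $\codelevel \neq \metalevel$), whence the reflexivity lemma gives $\lttypeq \proglevel T T l$ and reflexivity of $\approx$ on levels gives $\tyequiv[L]l l\Level$.

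For the type-conversion step, the IH applied to $\lttyping \codelevel x{T''}l$ produces a binding $x : T' \at{l'} \in \Gamma$ together with $\lttypeq \proglevel{T''}{T'}l$ and $\tyequiv[L]l{l'}\Level$; combining the conversion side condition $\lttypeq \proglevel T{T''}l$ with $\lttypeq \proglevel{T''}{T'}l$ by transitivity of the type-equivalence PER yields $\lttypeq \proglevel T{T'}l$, and the level equation carries over unchanged. For the universe-level conversion step, the IH applied to $\lttyping \codelevel x T{l''}$ produces $x : T' \at{l'} \in \Gamma$, $\lttypeq \proglevel T{T'}{l''}$, and $\tyequiv[L]{l''}{l'}\Level$; transitivity of $\approx$ on levels gives $\tyequiv[L]l{l'}\Level$, and the level-reindexing rule for type equality ($\lttypeq i T{T'}{l''}$ together with $\tyequiv[L]l{l''}\Level$ implies $\lttypeq i T{T'}l$) turns $\lttypeq \proglevel T{T'}{l''}$ into $\lttypeq \proglevel T{T'}l$. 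This exhausts all cases.

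The only delicate point is the one common to every inversion lemma: one must be confident that the rule set admits no further way of deriving a typing judgment for a variable term, so that the induction really reduces to these three cases. Here this is immediate from the shape of the rules, but it is worth stating explicitly, especially given the layer restrictions. Everything else is routine bookkeeping of universe levels, discharged uniformly by the PER rules for type equivalence, transitivity of $\approx$ on levels, and the level-reindexing rule, together with the presupposition and reflexivity lemmas, both of which are available by this stage. The same template will then be reused, essentially verbatim, for the remaining inversion lemmas at layer $\codelevel$ (for the other term formers and type encodings) that the coverage-and-progress argument needs.
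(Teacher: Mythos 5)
Your proof is correct and is exactly what the paper intends: the paper's own treatment of this lemma (and its siblings) is simply ``Induction,'' and you have spelled out precisely the three applicable rules (local variable, type conversion, level reindexing) and discharged each by the expected combination of the IH, presupposition/reflexivity for the base case, and transitivity of type/level equivalence plus the level-reindexing rule for the conversion cases. The one implicit assumption you correctly flag—that no other typing rule can conclude a judgment for a bare variable subject—is indeed immediate from the shape of the rule set, including the fact that the $\varlevel$-restricted rules, global-variable rule, and $\metalevel$-only constructs all have syntactically different subjects.
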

The statement of lemmas for terms need to also consider the equivalence of types. %
The equivalence of types is implicitly handled by when evaluating the recursive
principles: since the equivalence is at layer $\proglevel$, computation applies, so the
equivalence can be acknowledged by the conversion checking algorithm. %
\begin{lemma}
  If $\lttyping \codelevel \Nat{T}{l}$, then $\lttypeq \proglevel{T}{\Ty 0}{l}$ and $\tyequiv[L]{l}{1}\Level$. 
\end{lemma}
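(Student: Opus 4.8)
The plan is to mirror the preceding inversion lemmas and argue by induction on the derivation of $\lttyping \codelevel \Nat T l$. The first step is to enumerate which inference rules can conclude a typing judgment whose subject is the term $\Nat$ (the encoding of the natural number type) at layer $\codelevel$. Since, by the stated convention, layer $\varlevel$ is unavailable for every rule of \Cref{sec:dt:mltt-rules} except the local-variable rule and its congruence, and since no rule added in \Cref{sec:dt:modal} has $\Nat$ as its subject, the only candidates are three: the introduction rule for $\Nat$ as an encoding of a type, which forces $T = \Ty 0$ and $l = 1$; the type-conversion rule, with premises $\lttyping \codelevel \Nat{T'}l$ and $\lttypeq{\typeof \codelevel}T{T'}l$; and the universe-level-conversion rule, with premises $\lttyping \codelevel \Nat T{l'}$ and $\tyequiv[L]l{l'}\Level$. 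Establishing this exhaustiveness is a direct syntactic inspection.

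For the introduction rule, presupposition (\Cref{lem:dt:presup1}, together with the full presupposition lemma) yields $\lpjudge \proglevel \Gamma$, whence $\lttypwf \proglevel {\Ty 0} 1$ and so $\lttypeq \proglevel {\Ty 0}{\Ty 0} 1$ by reflexivity, while $\tyequiv[L] 1 1 \Level$ is immediate; both conclusions hold. For the type-conversion rule, the induction hypothesis applied to $\lttyping \codelevel \Nat{T'}l$ gives $\lttypeq \proglevel{T'}{\Ty 0}l$ and $\tyequiv[L]l1\Level$; since $\typeof \codelevel = \proglevel$, the side premise is $\lttypeq \proglevel T{T'}l$, and transitivity of type equivalence delivers $\lttypeq \proglevel T{\Ty 0}l$, with $\tyequiv[L]l1\Level$ carried over unchanged. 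For the universe-level-conversion rule, the induction hypothesis on $\lttyping \codelevel \Nat T{l'}$ gives $\lttypeq \proglevel T{\Ty 0}{l'}$ and $\tyequiv[L]{l'}1\Level$; the PER rules for universe levels then give $\tyequiv[L]l1\Level$ from $\tyequiv[L]l{l'}\Level$, and the universe-level-adjustment congruence for type equivalence upgrades $\lttypeq \proglevel T{\Ty 0}{l'}$ to $\lttypeq \proglevel T{\Ty 0}l$.

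I do not expect any real obstacle: the argument is a routine structural induction, entirely parallel to the $\Nat$-type inversion lemma and the $\PI$ inversion lemma established just above. The only step demanding any care is the exhaustiveness claim of the first paragraph, and as in the companion lemmas the derivations produced in the conclusion (here $\lttypeq \proglevel T{\Ty 0}l$ arising ultimately from the introduction rule modulo conversions) are sub-derivations of the input, so the lemma doubles as the well-foundedness record that the recursive principle for code of terms recurses on typing derivations rather than on raw syntax.
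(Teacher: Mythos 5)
Your proof is correct and mirrors the paper's (implicit) argument for this family of inversion lemmas: induction on the typing derivation, observing that only the $\Nat$-encoding rule and the two conversion rules apply, then reflexivity, transitivity, and universe-level conversion in the three cases. One small imprecision in your closing remark: unlike the $\Pi$ inversion lemma immediately above, this lemma makes no sub-derivation claim (its conclusions are equivalence judgments built via reflexivity and transitivity, not sub-derivations of the input), so the well-foundedness observation, while true of the lemmas that do assert sub-derivations, is not really carried by this one.
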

\begin{lemma}
  If $\lttyping \codelevel{\PI l{l'}x s t}{T}{l''}$, then as sub-derivations $\lttyping \codelevel s{\Ty
    l}{1 + l}$ and
  $\lttyping[\Psi][\Gamma, x : \Elt l s \at l] \codelevel{t}{\Ty{l'}}{1 + l'}$, $\lttypeq
  \proglevel{T}{\Ty{l \sqcup l'}}{l''}$ and $\tyequiv[L]{l''}{1 + {(l \sqcup l')}}\Level$. 
\end{lemma}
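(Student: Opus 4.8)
The plan is to proceed by induction on the derivation of $\lttyping \codelevel{\PI l{l'}x s t}{T}{l''}$, exactly in the style of the preceding inversion lemmas for $\Nat$, $\Ty l$, $\Elt l t$ and the variable rule. Since $\PI l{l'}x s t$ is a distinct term constructor, the only rules that can conclude a judgment of this shape are the typing rule for the $\Pi$-encoding, the type-conversion rule, and the universe-level conversion rule. In the base case we have
\begin{mathpar}
  \inferrule
  {\typing[L]l\Level \\ \typing[L]{l'}\Level \\\\
    \lttyping \codelevel s{\Ty l}{1 + l} \\ \lttyping[\Psi][\Gamma, x : \Elt l s \at l] \codelevel t{\Ty{l'}}{1 + l'}}
  {\lttyping \codelevel{\PI l{l'} x s t}{\Ty{l \sqcup l'}}{1 + {(l \sqcup l')}}}
\end{mathpar}
so that $T = \Ty{l \sqcup l'}$ and $l'' = 1 + (l \sqcup l')$. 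The two premises are precisely the sub-derivations for $s$ and $t$ demanded by the statement; the equivalence $\lttypeq \proglevel T{\Ty{l \sqcup l'}}{l''}$ follows by reflexivity, and $\tyequiv[L]{l''}{1 + (l \sqcup l')}\Level$ holds trivially.

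For the type-conversion rule, whose conclusion is $\lttyping \codelevel{\PI l{l'}x s t}T{l''}$ from premises $\lttyping \codelevel{\PI l{l'}x s t}{T'}{l''}$ and $\lttypeq \proglevel T{T'}{l''}$ (recall $\typeof \codelevel = \proglevel$), the induction hypothesis applied to the first premise yields the sub-derivations for $s$ and $t$ together with $\lttypeq \proglevel{T'}{\Ty{l \sqcup l'}}{l''}$ and $\tyequiv[L]{l''}{1 + (l \sqcup l')}\Level$. Composing $\lttypeq \proglevel T{T'}{l''}$ with $\lttypeq \proglevel{T'}{\Ty{l \sqcup l'}}{l''}$ by transitivity of type equivalence at layer $\proglevel$ gives $\lttypeq \proglevel T{\Ty{l \sqcup l'}}{l''}$, and the level equivalence is carried over verbatim. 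The universe-level conversion rule is handled analogously: the induction hypothesis delivers everything at some level $l_1$ with $\tyequiv[L]{l''}{l_1}\Level$; we transport $\lttypeq \proglevel T{\Ty{l \sqcup l'}}{l_1}$ down to level $l''$ using the level-conversion rule for type equivalence, and close $\tyequiv[L]{l''}{1 + (l \sqcup l')}\Level$ by transitivity.

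Two small points of care deserve mention, but neither is a genuine obstacle. First, the conversion rules do not introduce any new typing derivation for $\PI l{l'}x s t$, so the derivations for $s$ and $t$ returned by the lemma remain honest sub-derivations of the input; this is exactly the property invoked in \Cref{sec:dt:syn:cov} to guarantee well-foundedness of the recursive principles. Second, the universe levels $l$ and $l'$ occurring in the conclusion are pinned down syntactically by the term $\PI l{l'}x s t$ itself, so there is no ambiguity in their choice across the induction. The lemma then feeds directly into the progress argument: when the scrutinee of a recursive principle for code of terms is $\boxit{\PI l{l'}x s t}$, this inversion forces the index type to be equivalent to $\Ty{l \sqcup l'}$ at layer $\proglevel$, which is precisely the form prescribed by the corresponding $\beta$ rule.
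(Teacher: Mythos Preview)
Your proposal is correct and matches the paper's approach exactly: the paper proves the analogous inversion lemmas in this block by a simple induction on the typing derivation, noting that only the introduction rule and the two conversion rules apply (cf.\ the proof for $\Nat$ and for $\PI l{l'} x S T$ as a type). Your elaboration of the three cases, including the use of transitivity for type equivalence and level equivalence in the conversion cases and the observation that the sub-derivations for $s$ and $t$ are preserved through the conversion rules, fills in precisely the details the paper leaves implicit.
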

\begin{lemma}
  If $\lttyping \codelevel{\Ty l}{T}{l'}$, then $\lttypeq \proglevel{T}{\Ty{1 + l}}{l'}$ and
  $\tyequiv[L]{l'}{2 + l}\Level$.
\end{lemma}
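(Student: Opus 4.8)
The plan is to argue by induction on the derivation of $\lttyping \codelevel{\Ty l}{T}{l'}$, in the same style as the inversion lemmas stated just above. Since the subject term is pinned to the encoding $\Ty l$, only three rules could have concluded such a judgment: the encoding rule for $\Ty l$ as a term (from $\lpjudge{\typeof\codelevel}\Gamma$ and $\typing[L]l\Level$ one concludes $\lttypingd \codelevel{\Ty l}{1+l}{2+l}$), the type-conversion rule, and the universe-level-conversion rule. The latter two are available at layer $\codelevel$ by the note on layer-$\varlevel$ rules, and the encoding rule is a general MLTT rule, hence also available at $\codelevel$. None of the three alters the subject term, so the induction is well-founded and the case split is exhaustive.

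In the base case the subject was typed by the encoding rule, so $T$ is syntactically $\Ty{1+l}$ and $l'$ is syntactically $2+l$. Then $\tyequiv[L]{l'}{2+l}\Level$ holds by reflexivity of equivalence on universe levels, and $\lttypeq \proglevel{T}{\Ty{1+l}}{l'}$ holds by the Reflexivity lemma, whose hypothesis $\lttypwf \proglevel{\Ty{1+l}}{2+l}$ is obtained by applying the well-formedness half of the encoding rule at level $1+l$, using $\lpjudge \proglevel\Gamma$ (which is $\lpjudge{\typeof\codelevel}\Gamma$) and $\typing[L]{1+l}\Level$ (from $\typing[L]l\Level$ by the successor rule).

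For the type-conversion case we have $\lttyping \codelevel{\Ty l}{T'}{l'}$ together with $\lttypeq{\typeof\codelevel}T{T'}{l'}$, i.e. $\lttypeq \proglevel T{T'}{l'}$; the induction hypothesis gives $\lttypeq \proglevel{T'}{\Ty{1+l}}{l'}$ and $\tyequiv[L]{l'}{2+l}\Level$, and transitivity of type equivalence yields $\lttypeq \proglevel{T}{\Ty{1+l}}{l'}$, the universe-level claim being unchanged. For the universe-level-conversion case we have $\lttyping \codelevel{\Ty l}{T}{l''}$ and $\tyequiv[L]{l'}{l''}\Level$; the induction hypothesis gives $\lttypeq \proglevel{T}{\Ty{1+l}}{l''}$ and $\tyequiv[L]{l''}{2+l}\Level$, so $\tyequiv[L]{l'}{2+l}\Level$ follows by transitivity on universe levels, and the universe-level-conversion rule for type equivalence (applied with $\tyequiv[L]{l''}{l'}\Level$, obtained by symmetry) turns $\lttypeq \proglevel{T}{\Ty{1+l}}{l''}$ into $\lttypeq \proglevel{T}{\Ty{1+l}}{l'}$.

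I expect no genuine obstacle here: this is a routine inversion lemma of the same kind as the surrounding ones. The only points that need attention are recording the type-equivalence conclusion at layer $\proglevel = \typeof\codelevel$ — exactly the layer appearing in the type-conversion rule, so no layer mismatch arises — and keeping the universe-level annotation synchronized across the two conversion rules using the PER rules for $\Level$ and the universe-level-conversion rule for type equivalence. Unlike the $\Pi$-type inversion lemma, no sub-derivation tracking is required.
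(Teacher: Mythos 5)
Your proposal is correct and takes essentially the same route as the paper: the paper proves the adjacent inversion lemmas in exactly this way (induction on the typing derivation, splitting on the unique constructor rule for the encoding together with the two conversion rules), and this lemma is stated without proof as routine of the same form. One tiny inaccuracy: in the universe-level-conversion case you invoke symmetry to turn $\tyequiv[L]{l'}{l''}\Level$ into $\tyequiv[L]{l''}{l'}\Level$, but the rule $\inferrule{\lttypeq i T{T'}{l''} \\ \tyequiv[L]{l'}{l''}\Level}{\lttypeq i{T}{T'}{l'}}$ already wants $\tyequiv[L]{l'}{l''}\Level$ in that orientation, so the symmetry step is superfluous; it is harmless since $\approx$ on levels is a PER.
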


\begin{lemma}
  If $\lttyping \codelevel \ze T l$, then $\lttypeq \proglevel{T}{\Nat}{0}$
  and $\tyequiv[L]{l}{0}\Level$.   
\end{lemma}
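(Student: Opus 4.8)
The plan is to proceed by induction on the derivation of $\lttyping \codelevel \ze T l$. At layer $\codelevel$ the only rules that can derive a judgment whose subject is $\ze$ are the introduction rule for $\ze$, the conversion rule along type equivalence, and the conversion rule along universe-level equivalence, so these are the three cases to treat. (As with the preceding lemmas of this section, layer $\codelevel$ rules out the computation rules and the structural rules for other term formers.)

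In the base case, the derivation ends with the introduction rule, which forces $T$ to be syntactically $\Nat$ and $l$ to be $0$, with premise $\lpjudge \proglevel \Gamma$ (recall $\typeof \codelevel = \proglevel$). The first conclusion $\lttypeq \proglevel \Nat \Nat 0$ then follows from the reflexivity lemma applied to $\lttypwf \proglevel \Nat 0$, which is itself immediate from $\lpjudge \proglevel \Gamma$ by the well-formedness rule for $\Nat$; the second conclusion $\tyequiv[L]0 0 \Level$ is reflexivity of universe-level equivalence.

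In the conversion-along-types case, the premises are $\lttyping \codelevel \ze{T'}l$ and $\lttypeq \proglevel T{T'}l$ (again using $\typeof \codelevel = \proglevel$). By the induction hypothesis on the first premise we obtain $\lttypeq \proglevel{T'}\Nat 0$ together with $\tyequiv[L]l 0\Level$. Using the latter and the universe-level conversion rule for type equivalence, the premise $\lttypeq \proglevel T{T'}l$ is re-indexed to $\lttypeq \proglevel T{T'}0$; transitivity of type equivalence at level $0$ with $\lttypeq \proglevel{T'}\Nat 0$ then yields $\lttypeq \proglevel T\Nat 0$, and $\tyequiv[L]l 0\Level$ is already in hand. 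In the conversion-along-levels case, the premises are $\lttyping \codelevel \ze T{l'}$ and $\tyequiv[L]l{l'}\Level$; the induction hypothesis on the first premise gives $\lttypeq \proglevel T\Nat 0$ and $\tyequiv[L]{l'}0\Level$, and transitivity of universe-level equivalence with $\tyequiv[L]l{l'}\Level$ delivers $\tyequiv[L]l 0\Level$.

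I expect no real obstacle here: this is one more member of the family of layer-$\codelevel$ inversion lemmas whose proofs all follow the same short script (compare the immediately preceding lemmas on $\lttyping \codelevel \Nat{T}{l}$ and $\lttyping \codelevel{\Ty l}{T}{l'}$). The only mild subtlety — shared with those lemmas — is the bookkeeping of universe levels across the two conversion rules: one must push the equivalences produced by the induction hypothesis through the universe-level conversion rules and the PER laws so that the type equivalence lands at level $0$ and the level equivalence ends up relating $l$ with $0$.
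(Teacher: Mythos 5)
Your proof is correct and follows the same approach the paper intends: the paper gives no explicit proof for this particular lemma but establishes the schema with the analogous lemma for $\lttypwf \codelevel \Nat l$ ("By induction. The only applicable rules are the well-formedness rule and the conversion rule."), and your case analysis — the $\ze$ introduction rule plus the two conversion rules, using $\typeof\codelevel = \proglevel$ — is exactly that schema spelled out. The only cosmetic difference is that in the base case you invoke the general reflexivity lemma to get $\lttypeq \proglevel\Nat\Nat 0$, whereas one could also appeal directly to the congruence rule for $\Nat$; both are immediate from $\lpjudge\proglevel\Gamma$.
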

\begin{lemma}
  If $\lttyping \codelevel{\su t}{T}{l}$, then as a sub-derivation $\lttyping \codelevel t{\Nat}{0}$,
  $\lttypeq \proglevel{T}{\Nat}{0}$ and
  $\tyequiv[L]{l}{0}\Level$.
\end{lemma}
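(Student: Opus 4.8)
The plan is to prove the lemma by induction on the derivation of $\lttyping \codelevel{\su t}{T}{l}$, following exactly the same recipe as the sibling inversion lemmas established just above (for $\ze$, $\Ty l$, the encodings, and so on). Since $\su t$ is a term former and is never an encoding of a type, the only typing rules whose conclusion can be an instance of $\lttyping \codelevel{\su t}{T}{l}$ are three: the introduction rule for $\tsucc$, the type-conversion rule, and the universe-level conversion rule. I would dispatch these cases in turn.

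In the base case the derivation ends with the rule that takes $\lttyping \codelevel t \Nat 0$ to $\lttyping \codelevel{\su t} \Nat 0$. Here $T = \Nat$ and $l = 0$, so the premise \emph{is} the required sub-derivation $\lttyping \codelevel t{\Nat}{0}$, while $\lttypeq \proglevel{\Nat}{\Nat}{0}$ and $\tyequiv[L]{0}{0}\Level$ hold by reflexivity.

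In the type-conversion case the derivation ends with the rule whose premises are $\lttyping \codelevel{\su t}{T'}{l}$ and $\lttypeq{\proglevel}{T}{T'}{l}$ (using $\typeof \codelevel = \proglevel$). Applying the induction hypothesis to the first premise yields a sub-derivation $\lttyping \codelevel t{\Nat}{0}$ together with $\lttypeq \proglevel{T'}{\Nat}{0}$ and $\tyequiv[L]{l}{0}\Level$. I would first coerce the conversion premise along $\tyequiv[L]{l}{0}\Level$, using the universe-level coercion rule for the type-equivalence judgment, to get $\lttypeq \proglevel{T}{T'}{0}$, and then compose with $\lttypeq \proglevel{T'}{\Nat}{0}$ by transitivity, obtaining $\lttypeq \proglevel{T}{\Nat}{0}$; the level equivalence $\tyequiv[L]{l}{0}\Level$ is already available. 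The conversion rule leaves the term untouched, so the sub-derivation for $t$ produced by the induction hypothesis is still a sub-derivation of the whole derivation. The universe-level conversion case is handled in the same way: from $\lttyping \codelevel{\su t}{T}{l'}$ and $\tyequiv[L]{l}{l'}\Level$ the induction hypothesis supplies the sub-derivation, $\lttypeq \proglevel{T}{\Nat}{0}$ and $\tyequiv[L]{l'}{0}\Level$, and transitivity of $\approx$ on universe levels gives $\tyequiv[L]{l}{0}\Level$.

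There is no genuine obstacle; the only thing demanding a little care — and it is common to all the inversion lemmas in this subsection — is the bookkeeping of universe levels across the two conversion rules, so that the reported type equivalence ends up at level $0$ rather than at the ambient $l$. This is exactly why the statement additionally records $\tyequiv[L]{l}{0}\Level$, and why the proof must pass the conversion premise through the universe-level coercion rule for type equivalence before chaining it with the induction hypothesis.
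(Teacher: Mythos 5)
Your proof is correct and follows the same approach the paper intends for this family of inversion lemmas, namely induction on the typing derivation with exactly three applicable last rules (the $\tsucc$ introduction rule and the two conversion rules), threading the level equivalence through the conversion cases. The only detail left implicit is that reflexivity of $\lttypeq \proglevel{\Nat}{\Nat}{0}$ in the base case rests on $\lpjudge{\proglevel}\Gamma$, which is obtained from the premise by presupposition; this is routine and does not affect the argument.
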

\begin{lemma}
  If $\lttyping \codelevel{\ELIMN l{x.M}{s}{x,y.s'}t}{T}{l'}$, then as sub-derivations
  \begin{itemize}
  \item $\lttypwf[\Psi][\Gamma, x : \Nat \at 0] \codelevel M l$,
  \item $\lttyping \codelevel s {M[\ze/x]}l$,
  \item $\lttyping[\Psi][\Gamma, x : \Nat \at 0, y : M \at l] \codelevel {s'}{M[\su x/x]}l$
  \item $\lttyping \codelevel t \Nat 0$;
  \end{itemize}
  moreover $\lttypeq \proglevel{T}{M[t/x]}{l'}$ and
  $\tyequiv[L]{l'}{l}\Level$.
\end{lemma}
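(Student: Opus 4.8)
The plan is to prove this by induction on the derivation of $\lttyping \codelevel{\ELIMN l{x.M}{s}{x,y.s'}t}{T}{l'}$, following verbatim the pattern of the preceding inversion lemmas for $\Nat$, $\PI l{l'}xst$, $\Ty l$, $\ze$, $\su t$, and the other layer-$\codelevel$ term formers. Since layer $\varlevel$ is unavailable for the recursion principle (only the local-variable rule, the local-substitution rules, the global-variable rule, and the conversion rules may be indexed by $\varlevel$), the only rules that can conclude a layer-$\codelevel$ judgment whose subject is $\ELIMN l{x.M}{s}{x,y.s'}t$ are the typing rule for $\telimn$ itself, the type-conversion rule, and the universe-level conversion rule. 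This yields one base case and two pass-through step cases.

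In the base case the last rule is the $\telimn$ typing rule instantiated at layer $\codelevel$; its premises are exactly the four desired sub-derivations together with $\typing[L]l\Level$, and its conclusion forces $T = M[t/x]$ and $l' = l$. Hence the four sub-derivation claims hold literally, $\tyequiv[L]{l'}{l}\Level$ holds by reflexivity of level equivalence, and $\lttypeq \proglevel{T}{M[t/x]}{l'}$ reduces to $\lttypeq \proglevel{M[t/x]}{M[t/x]}{l}$. The latter follows from the Reflexivity lemma once $M[t/x]$ is known well-formed at layer $\proglevel$: apply Lifting (\Cref{lem:dt:lift}) to $\lttypwf[\Psi][\Gamma, x : \Nat \at 0] \codelevel M l$ and to $\lttyping \codelevel t \Nat 0$, then the Local Substitutions lemma with the substitution $\id_\Gamma, t/x$ (which is well-typed at layer $\proglevel$ using $\ltsubst \proglevel{\id_\Gamma}{\Gamma}$, well-formedness of $\Nat$, and the lifted typing of $t$).

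In the type-conversion step case we have a shorter derivation of $\lttyping \codelevel{\ELIMN l{x.M}{s}{x,y.s'}t}{T''}{l'}$ together with $\lttypeq \proglevel T{T''}{l'}$ (recall $\typeof \codelevel = \proglevel$); the induction hypothesis supplies the four sub-derivations, $\tyequiv[L]{l'}{l}\Level$, and $\lttypeq \proglevel{T''}{M[t/x]}{l'}$, so transitivity of type equivalence gives $\lttypeq \proglevel T{M[t/x]}{l'}$. The universe-level step case is analogous: for the witnessing level $l''$ the induction hypothesis gives $\tyequiv[L]{l''}{l}\Level$ and $\lttypeq \proglevel T{M[t/x]}{l''}$; combined with $\tyequiv[L]{l'}{l''}\Level$ this yields $\tyequiv[L]{l'}{l}\Level$ by transitivity of level equivalence and, transporting the type equivalence along $\tyequiv[L]{l'}{l''}\Level$ via the level-conversion rule for type equivalence, $\lttypeq \proglevel T{M[t/x]}{l'}$.

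The only point requiring care — and the reason this lemma is load-bearing for \Cref{sec:dt:syn:cov} — is maintaining the ``as sub-derivations'' qualifier: the four extracted derivations must be genuine sub-derivations of the input, so that coverage and progress of the recursive principles can invoke them without jeopardising well-foundedness. This is immediate in the base case (they are premises) and preserved by the two pass-through step cases (which do not touch $M$, $s$, $s'$, $t$). So I do not anticipate a real obstacle; the mildly delicate bookkeeping is just tracking which layer each premise lives at and the single appeal to Lifting that moves $M$ and $t$ from $\codelevel$ to $\proglevel$ so that the conversion $\lttypeq \proglevel T{M[t/x]}{l'}$ can even be stated.
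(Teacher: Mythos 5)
Your proof is correct and follows the same approach the paper uses for the other inversion lemmas in this section, namely induction on the typing derivation with the three applicable cases (the $\telimn$ rule and the two conversion rules); the paper merely writes "Induction" for the analogous lemmas, and your fuller treatment of the base-case reflexivity via lifting and the local-substitution lemma, and of sub-derivation preservation through the pass-through cases, fills in exactly what the paper leaves implicit.
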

\begin{lemma}
  If $\lttyping \codelevel{\LAM{l}{l'}x S t}{T'}{l''}$, then as sub-derivations
  \begin{itemize}
  \item $\lttypwf \codelevel S l$,
  \item $\lttyping[\Psi][\Gamma, x : S \at l] \codelevel t{T}{l'}$;
  \end{itemize}
  moreover $\lttypeq \proglevel{T'}{\PI{l}{l'}x S T}{l''}$ and
  $\tyequiv[L]{l''}{l \sqcup l'}\Level$.
\end{lemma}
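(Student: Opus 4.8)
The plan is to prove this by induction on the derivation of $\lttyping \codelevel{\LAM{l}{l'}x S t}{T'}{l''}$, exactly as in the proof of the companion inversion lemma for $\PI l{l'}x s t$ above. Since the subject has a function abstraction as its head constructor and layer $\varlevel$ does not apply to the $\LAM$-introduction rule, only three rules can conclude this judgment: the $\LAM$-introduction rule, the type conversion rule (whose type equivalence lives at layer $\typeof \codelevel = \proglevel$), and the universe-level conversion rule. I would treat these three cases in turn.

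In the base case the derivation ends with the $\LAM$-introduction rule, so its premises directly provide $\lttypwf \codelevel S l$ and $\lttyping[\Psi][\Gamma, x : S \at l]\codelevel t T{l'}$ as strict sub-derivations, and moreover $T' = \PI l{l'}x S T$ and $l'' = l \sqcup l'$ hold syntactically; the required $\lttypeq \proglevel{T'}{\PI l{l'}x S T}{l''}$ and $\tyequiv[L]{l''}{l \sqcup l'}\Level$ then follow from the Reflexivity lemma (using the presupposition lemma to supply $\lttypwf \proglevel{T'}{l''}$) and the reflexivity rule for universe levels. For the type conversion case the premises are $\lttyping \codelevel{\LAM l{l'}x S t}{T''}{l''}$ and $\lttypeq \proglevel{T'}{T''}{l''}$; applying the IH to the former yields $S$, $T$ together with the two sub-derivations, $\lttypeq \proglevel{T''}{\PI l{l'}x S T}{l''}$, and $\tyequiv[L]{l''}{l \sqcup l'}\Level$, and transitivity of type equivalence at layer $\proglevel$ then gives $\lttypeq \proglevel{T'}{\PI l{l'}x S T}{l''}$. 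The universe-level conversion case is symmetric: the IH is invoked at the old level $l'''$, giving $\tyequiv[L]{l'''}{l \sqcup l'}\Level$ and $\lttypeq \proglevel{T'}{\PI l{l'}x S T}{l'''}$; I would then compose with $\tyequiv[L]{l''}{l'''}\Level$ by transitivity of universe equivalence and transport the type equivalence up to level $l''$ via the universe-level conversion rule for type equivalence.

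The only point requiring genuine care is the bookkeeping that the witnesses $S$, $T$ and the two sub-derivations extracted in the conversion cases are exactly those transported through the IH, so that the ``sub-derivation'' part of the conclusion is preserved down the chain of conversions; this is precisely what guarantees the well-foundedness of the recursive principles, as already observed after the $\PI l{l'}x S T$ inversion lemma. Everything else is routine transitivity together with appeals to presupposition for the side conditions. The same three-way split — isolate the head constructor, peel the two conversion rules, close by transitivity — also establishes the inversion lemmas for $\ze$, $\su t$, $\telimn$, and function application stated immediately above, so I would prove all of them by this one uniform argument.
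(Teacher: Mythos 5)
Your proposal is correct and takes essentially the same approach as the paper: the paper proves the entire family of inversion lemmas in this subsection by a terse ``Induction'' with a one-line remark that the only applicable rules are the head constructor's rule and the two conversion rules, and your three-way case split (introduction rule, type-conversion rule at layer $\proglevel$, universe-level conversion rule) together with transitivity and presupposition spells out precisely that argument.
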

Notice that in this case, we do not have the well-formedness of $T$ as a
sub-derivation. %
This is reflected in the premises for the $t_\lambda$ branch that the global variable $U_T$
representing $T$ lives at layer $\proglevel$. %
In general, a global assumption can live at layer $\codelevel$ and have a recursive call only
if it has a sub-derivation in the typing judgment.
\begin{lemma}
  If $\lttyping \codelevel{\APP t {l}{l'}x S T s}{T'}{l''}$, then as sub-derivations
  \begin{itemize}
  \item $\lttypwf \codelevel S l$,
  \item $\lttypwf[\Psi][\Gamma, x : S \at l] \codelevel{T}{l'}$,
  \item $\lttyping \codelevel t{\PI l{l'} x S T}{l \sqcup l'}$,
  \item $\lttyping \codelevel s S l$;
  \end{itemize}
  moreover $\lttypeq \proglevel{T'}{T[s/x]}{l''}$ and
  $\tyequiv[L]{l''}{l'}\Level$.
\end{lemma}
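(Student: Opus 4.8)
The plan is to prove this by induction on the derivation of $\lttyping \codelevel{\APP t l{l'}x S T s}{T'}{l''}$, exactly as in the preceding inversion lemmas. Since the subject $\APP t l{l'}x S T s$ is a fixed syntactic form, only three rules can conclude such a judgment: the typing rule for application, the type-conversion rule, and the universe-level-conversion rule. A key preliminary observation is that neither conversion rule alters the subject term, so in every branch of the induction the components $t$, $s$, $S$, $T$ and the level annotations are literally those appearing in the statement; this is what makes the ``sub-derivation'' claims go through, and it is precisely the sub-derivation property that later guarantees well-foundedness of the recursive principles.

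For the base case, the application rule gives directly $\lttypwf \codelevel S l$, $\lttypwf[\Psi][\Gamma, x : S \at l] \codelevel{T}{l'}$, $\lttyping \codelevel t{\PI l{l'} x S T}{l \sqcup l'}$ and $\lttyping \codelevel s S l$ as immediate sub-derivations, together with $T' = T[s/x]$ and $l'' = l'$. To produce $\lttypeq \proglevel{T'}{T[s/x]}{l''}$ I would first lift $\lttypwf[\Psi][\Gamma, x : S \at l] \codelevel{T}{l'}$ and $\lttyping \codelevel s S l$ to layer $\proglevel$ using the Lifting lemma (\Cref{lem:dt:lift}; recall $\typeof \codelevel = \proglevel$), form the well-typed local substitution $\id_\Gamma, s/x$ at layer $\proglevel$, apply the Local Substitutions lemma to obtain $\lttypwf \proglevel{T[s/x]}{l'}$, and then conclude by the Reflexivity lemma. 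The level equivalence $\tyequiv[L]{l''}{l'}\Level$ is just reflexivity of $\approx$ on universe levels.

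For the type-conversion case, the derivation ends with premises $\lttyping \codelevel{\APP t l{l'}x S T s}{T_0}{l''}$ and $\lttypeq \proglevel{T'}{T_0}{l''}$, the equivalence sitting at $\typeof \codelevel = \proglevel$. The induction hypothesis applied to the first premise yields all the sub-derivations and $\lttypeq \proglevel{T_0}{T[s/x]}{l''}$; symmetry and transitivity of type equivalence then give $\lttypeq \proglevel{T'}{T[s/x]}{l''}$, while $\tyequiv[L]{l''}{l'}\Level$ is carried over unchanged. For the universe-conversion case, the derivation ends with $\lttyping \codelevel{\APP t l{l'}x S T s}{T'}{l_0}$ and $\tyequiv[L]{l''}{l_0}\Level$; the induction hypothesis gives the sub-derivations, $\lttypeq \proglevel{T'}{T[s/x]}{l_0}$ and $\tyequiv[L]{l_0}{l'}\Level$, from which transitivity of $\approx$ delivers $\tyequiv[L]{l''}{l'}\Level$, and the universe-transport rule for type equivalence (from $\lttypeq i T{T'}{l'}$ and $\tyequiv[L]l{l'}\Level$ conclude $\lttypeq i T{T'}l$) delivers $\lttypeq \proglevel{T'}{T[s/x]}{l''}$.

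There is no real obstacle here: the argument is entirely routine inversion. The only points requiring a moment's care are (i) noting that the conversion rules leave the subject term untouched, so the recursive structure is preserved and the sub-derivations remain literal sub-derivations, and (ii) keeping the layer bookkeeping straight, since the equivalence premise of the conversion rule lives at layer $\typeof \codelevel = \proglevel$ rather than at $\codelevel$ --- which is exactly what the conclusion demands, and what later allows the convertibility algorithm to absorb these equivalences at layer $\proglevel$.
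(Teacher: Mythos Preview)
Your proposal is correct and follows the same approach as the paper: induction on the typing derivation, with only the application rule and the two conversion rules applicable at layer $\codelevel$. The paper does not spell out a proof for this particular lemma, relying instead on the pattern established in the preceding inversion lemmas of the same section; your argument fills in precisely those details, including the correct handling of the layer shift $\typeof \codelevel = \proglevel$ in the conversion premise.
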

There is no other possible terms at layer $\codelevel$. %
These lemmas give us a syntactic account of coverage and progress of the recursive
principles. %
In the next section, we give a more rigorous semantic account.

\section{Reduction and Convertibility}

We have finished syntactic verification for \delamlang. %
In this section, let us consider its dynamics by providing the reduction rules for
types and terms and the convertibility checking algorithm between two terms. %
The reduction relations to be given compute the weak head normal forms for
types and terms, respectively, and are sub-relations for equivalence judgments for
types and terms. %
We first give the syntax for weak head normal forms and neutral forms, and then give
the rules for reduction. %
We will need the reduction relations to write down the Kripke logical relations in the
next section as well as in the convertibility checking algorithm. %

\subsection{Weak Head Normal Forms}

The following gives the syntax for weak head normal forms and neutral forms for types
and terms. %
As usual, we use capital case for types and lower case for terms. 

\begin{alignat*}{2}
  W &:= && \ \Nat \sep \PI{l}{l'}x S T \sep \Ty l \sep \UPI \ell l T 
           \tag{Weak head normal form for types}\\
    & &&  \sep \CPI g l T \sep \TPI U l{l'}T \sep \CTyp l \sep \CTrm T l  \\
  V & := &&\ U^\delta \sep \Elt l \nu
  \tag{Neutral form for types}\\
  w &:= &&\ \nu \sep \Nat \sep \PI{l}{l'}x s t \sep \Ty l \sep \ze \sep \su t \sep \LAM l{l'} x S t
           \tag{Weak head normal form for terms ($\Nf$)} \\
    & && \sep \ULAM l \ell t \sep \CLAM l g t \sep \TLAM l{l'} U t \sep \boxit T \sep
         \boxit t \\
  \nu &:= &&\ x \sep u^\delta \sep \ELIMN l{x.M}{s}{x, y.s'}{\nu} \sep \APP \nu l{l'} x S T
           s \sep \UAPP \nu l
           \tag{Neutral form for terms ($\Ne$)} \\
    & && \sep \CAPP \nu \Gamma \sep \TAPP \nu T \sep \LETBTYP{l'}l\Gamma{x_T. M}{U}{t'}\nu
         \sep \LETBTRM{l'}l\Gamma T{x_t. M}{u}{t'}\nu \\
    & && \sep \ELIMTYP{l_1}{l_2}Mbl\Gamma \nu \sep \ELIMTYP{l_1}{l_2}Mbl\Gamma {(\boxit
         {U^\delta})} \\
    & && \sep \ELIMTRM{l_1}{l_2}Mbl\Gamma T \nu \sep \ELIMTRM{l_1}{l_2}Mbl\Gamma T {(\boxit{u^\delta})}
\end{alignat*}

Notice that for the recursive principles, we block on global variables, following
\citet{hu2024layered}. 

\subsection{Reduction Relations}

There are two required reduction relations, one for types and one for terms. %
The one for types simply compute types from encodings via $\tEl$. %
Unlike~\citet{abel_decidability_2017} who employed typed reductions, we deliberately
use untyped reductions and use preservation later to make sure the reductions are
well-defined. %
This deviation from \citet{abel_decidability_2017} requires us to establish enough
syntactic theorems before hand. %
It is particularly important to use untyped reductions because of the way in which the logical relations
relate terms. %
We let $\reds$ to be the reflexive transitive closure of $\redd$. 
\begin{mathpar}
  {{\Elt 0 \Nat} \redd \Nat}

  {{\Elt{1 + l}{\Ty l}} \redd {\Ty l}}
  
  {{\Elt{l \sqcup l'}{\PI l{l'} x s t}}\redd{\PI l{l'} x{\Elt l s}{\Elt{l'} t}}}

  \inferrule
  {t \redd t'}
  {{\Elt{l}t} \redd {\Elt l{t'}}}
\end{mathpar}

The reduction rules for terms are simply the $\beta$ equivalence rules. %
\begin{mathpar}
  {{\ELIMN l{x.M}s{x,y. s'}\ze} \redd {s}}

  {{\ELIMN l{x.M}s{x,y. s'}{(\su t)}} \redd {s'[t/x,\ELIMN l{x.M}s{x,y. s'}t/y]}}

  {{\APP {\LAM {l}{l'} x S t} {l}{l'} x{S} T s} \redd {t[s/x]}}

  {{(\ULAM l \ell t)~\$~\vect l} \redd {t[\vect l/\vect \ell]}}

  {{\CAPP{(\CLAM l g t)}\Delta} \redd {t[\Delta/g]}}

  {{\TAPP{(\TLAM{l}{l'}U t)}{T}} \redd {t[T/U]}}

  {{\LETBTYP{l'}l \Delta{x_T.M}{U}{t'}{(\boxit T)}} \redd {t'[T/U]}}

  {{\LETBTRM{l'}l \Delta T{x_t.M}{u}{t'}{(\boxit t)}} \redd {t'[t/u]}}
\end{mathpar}

The reduction rules for recursors follow the same principle. %
We write down only one rule as an example and omit the rest as they are just
$\beta$ rules:
\begin{mathpar}
  {{\ELIMTYP{l_1}{l_2}M b{0}\Delta{(\boxit
        \Nat)}} \redd {t_\Nat[\Delta/g]}}
\end{mathpar}

The congruence rules reduce the terms at the weak head positions to discover further
redices. %
There are at least one congruence rules for all elimination forms. %
\begin{mathpar}
  \inferrule
  {t \redd t'}
  {{\ELIMN l{x.M}s{x,y. s'}t} \redd {\ELIMN l{x.M}s{x,y. s'}{t'}}}

  \inferrule
  {t \redd t'}
  {{\APP t {l}{l'} x{S} T s} \redd {\APP{t'}{l}{l'} x{S} T s}}
\end{mathpar}
For elimination forms for meta-programming, we have
\begin{mathpar}
  \inferrule
  {t \redd t'}
  {{t~\$~\vect l} \redd {t'~\$~\vect l}}

  \inferrule
  {t \redd t'}
  {{\CAPP{t}\Delta} \redd {\CAPP{t'}\Delta}}

  \inferrule
  {t \redd t'}
  {{\TAPP{t}{T}} \redd {\TAPP{t'}{T}}}

  \inferrule
  {t \redd t'}
  {{\LETBTYP{l'}l \Delta{x_T.M}{U}{s}{t}} \redd {\LETBTYP{l'}l \Delta{x_T.M}{U}{s}{t'}}}

  \inferrule
  {t \redd t'}
  {{\LETBTRM{l'}l \Delta T{x_t.M}{u}{s}{t}} \redd {\LETBTRM{l'}l \Delta T{x_t.M}{u}{s}{t'}}}
\end{mathpar}
There are also congruence rules for the recursive principles. %
For the recursive principle for code of terms, we choose to reduce the type to weak
head normal form first and then reduce the term itself. %
This order is arbitrary and can be flipped. %
We simply fix a choice here.
\begin{mathpar}
  \inferrule
  {t \redd t'}
  {{\ELIMTYP{l_1}{l_2}M b{l'}\Delta t} \redd {\ELIMTYP{l_1}{l_2}M b{l'}\Delta {t'}}}

  \inferrule
  {T \redd T'}
  {{\ELIMTRM{l_1}{l_2}M b{l'}\Delta T t} \redd {\ELIMTRM{l_1}{l_2}M b{l'}\Delta {T'}{t}}}

  \inferrule
  {t \redd t'}
  {{\ELIMTRM{l_1}{l_2}M b{l'}\Delta W t} \redd {\ELIMTRM{l_1}{l_2}M b{l'}\Delta W {t'}}}
\end{mathpar}
We first verify the fact that reductions are just sub-relations of the
equivalence judgments:
\begin{lemma}[Soundness]
  Given $\compt i$,
  \begin{itemize}
  \item if $\lttypwf i T l$ and $T \redd T'$, then $\lttypeq i T{T'} l$;
  \item if $\lttyping i t T l$ and $t \redd t'$, then $\lttyequiv i t{t'} T l$.
  \end{itemize}
\end{lemma}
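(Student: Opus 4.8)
The plan is to proceed by a simultaneous induction on the two reduction derivations $T \redd T'$ and $t \redd t'$, mirroring the analogous soundness lemma for the simply typed fragment in \Cref{sec:cv}. The reduction relations were deliberately set up so that every base rule is literally an instance of a $\beta$ rule or of a decoding rule for $\tEl$ --- all of which are equivalence rules --- while every congruence rule is the image of a congruence rule of the equivalence judgment. Hence the proof is largely bookkeeping, with one inversion step per base case.

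For the congruence cases --- reduction inside the head of an elimination form, such as $\ELIMN l{x.M}s{x,y.s'}t \redd \ELIMN l{x.M}s{x,y.s'}{t'}$ from $t \redd t'$, the application case, or $\Elt{l}{t} \redd \Elt{l}{t'}$ from $t \redd t'$ --- I would first invert the given typing or well-formedness derivation to extract the typing of the head subterm, using the presupposition lemma to recover side conditions and absorbing any trailing conversion and universe-conversion rules. The induction hypothesis then yields an equivalence between the old and new head subterms at that type, and the matching congruence rule of the equivalence judgment closes the case, with the unchanged subterms passed through by the reflexivity lemma. The only mild subtlety is that inversion delivers the head's type only up to definitional equivalence, so each congruence case ends with a conversion step to realign the stated output type.

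For the base cases --- the $\beta$ rules for $\Nat$, dependent functions, universe-polymorphic functions, the two meta-function types, the $\tEl$ decodings, the two $\tletbox$ reductions, and the schematic $\beta$ rules for the recursive principles for code --- the heart is inversion: from $\lttyping i t T l$ with $t$ a redex I must recover precisely the premises of the matching $\beta$ equivalence rule, together with the relation between $T$ and the output type that rule produces. For instance, from a typing derivation of $\APP{(\LAM{l}{l'}{x}{S}{t})}{l}{l'}{x}{S}{T}{s}$, inverting the application rule (modulo conversions) gives $\lttyping i{\LAM{l}{l'}{x}{S}{t}}{\PI{l}{l'}{x}{S}{T}}{l \sqcup l'}$ and $\lttyping i s S l$, and a further inversion of the $\lambda$ rule gives $\lttyping[\Psi][\Gamma, x : S \at l] i t T {l'}$; these are exactly the premises of the function $\beta$ rule, whose conclusion $\lttyequiv i{\APP{(\LAM{l}{l'}{x}{S}{t})}{l}{l'}{x}{S}{T}{s}}{t[s/x]}{T[s/x]}{l'}$ is then transported along a conversion to the type demanded by the statement. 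The $\tletbox$ and recursor cases follow the same recipe but unwind the larger premise bundles ($G_A$ and friends); since the typing rules carry explicit type annotations and sub-structures precisely so that these inversions succeed, no information is lost, and the coverage and progress results of \Cref{sec:dt:syn:cov} supply the needed inversion facts for code at layer $\codelevel$.

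The main obstacle is exactly this inversion step in the $\beta$ cases: showing that a typing derivation of a redex, which may end in an arbitrary stack of conversion and universe-conversion rules, can be unwound to the canonical premise set of the corresponding computation rule while tracking how the originally ascribed type relates, up to definitional equivalence at layer $\typeof i$, to the rule's output type. This is where the earlier syntactic machinery --- presupposition, the local and global substitution lemmas, and the coverage/progress lemmas of \Cref{sec:dt:syn:cov} --- is put to work to discharge side conditions and to see that the annotations carried by the redex are forced. Everything else, including extending the result to $\reds$ by transitivity of the equivalence judgments afterwards, is routine.
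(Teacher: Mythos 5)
Your outer induction is flipped relative to the paper's. The paper proceeds by mutual induction on the typing and well-formedness derivations $\lttypwf i T l$ and $\lttyping i t T l$ and then inverts (case-analyzes) the reduction step inside each case; you propose to induct on $T \redd T'$ and $t \redd t'$ and then invert the typing derivation. This is not a cosmetic difference. With the paper's ordering, when the last rule of the typing derivation is conversion or universe-conversion, the outer IH applies directly and one appeal to the corresponding conversion rule for equivalence discharges the case --- that single outer IH is precisely how an arbitrary trailing stack of conversions gets absorbed. With your ordering, no IH is ever available for the typing derivation, so every base case \emph{and} every congruence case must manually unwind the conversion stack during inversion, from scratch.

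That unwinding is where the proposal is not self-contained. You rightly identify "inversion absorbing trailing conversions" as the main obstacle, but you then claim the machinery already in hand --- in particular the coverage/progress lemmas of \Cref{sec:dt:syn:cov} --- supplies the needed facts. Those lemmas invert typing derivations at layer $\codelevel$ only; in the paper's proof they are invoked just to decompose the boxed code inside a recursor's scrutinee when the matching $\beta$ rule fires. What your base and congruence cases require are analogous inversion principles at the computable layers $i \in \{\proglevel,\metalevel\}$, and the development proves no such standalone lemmas --- the paper does not need them precisely because its outer induction on typing does that work implicitly, case by case. So either state and prove the computable-layer inversion lemmas as separate results (each of which is itself an induction on the typing derivation, effectively re-deriving the paper's organization in disguise), or switch the outer induction to the typing side as the paper does. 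As written, the plan leans on an inversion principle it does not have.
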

\begin{proof}
  We proceed by mutual induction on the typing judgments and then invert the reduction
  relations. %
  We use $\compt i$ to make sure computation rules are available. %
  We select a few cases for discussion:
  \begin{itemize}[label=Case]
  \item
    The following is the only possible rule for types:
    \begin{mathpar}
      \inferrule
      {\typing[L] l \Level \\ \lttyping i t{\Ty l}{1 + l}}
      {\lttypwf i{\Elt l t}{l}}
    \end{mathpar}
    Inversion of $\Elt l t \redd T'$ gives four possible subcases. %
    We only consider two:
    \begin{itemize}[label=Subcase]
    \item
      \[
        {\Elt{l_1 \sqcup l_2}{\PI {l_1}{l_2} x s t}}\redd{\PI {l_1}{l_2} x{\Elt{l_1} s}{\Elt{l_2} t}}
      \]
      Then we know
      \[
        \lttyping i{\PI {l_1}{l_2} x s t}{\Ty l}{1 + l}
      \]
      We further do an inner induction on the typing judgment above after generalizing
      $\Ty l$ to some arbitrary $T$. %
      There are only three cases to consider:
      \begin{itemize}[label=Subsubcase]
      \item
        \begin{mathpar}
          \inferrule
          {\typing[L]{l_1}\Level  \\ \typing[L]{l_2}\Level \\
            \lttyping i s{\Ty{l_1}}{1 + l_1} \\ \lttyping[\Psi][\Gamma, x : \Elt{l_1} s \at{l_1}] i t{\Ty{l_2}}{1 + l_2}}
          {\lttyping i{\PI {l_1}{l_2} x s t}{\Ty{l_1 \sqcup l_2}}{1 + {(l_1 \sqcup l_2)}}}
        \end{mathpar}
        Then this case we derive the goal immediately from the $\tEl$ rule for $\Pi$
        types. 
      \item
        \begin{mathpar}
          \inferrule
          {\lttyping i {\PI {l_1}{l_2} x s t}{T'}{l'} \\ \lttypeq{\typeof i}{T}{T'}{l'}}
          {\lttyping i {\PI {l_1}{l_2} x s t}{T}{l'}}
        \end{mathpar}
        In this case, we simply apply the inner IH to obtain the goal. 
      \item
        \begin{mathpar}
          \inferrule
          {\lttyping i {\PI {l_1}{l_2} x s t} T{l'} \\ \tyequiv[L]l{l'}\Level}
          {\lttyping i {\PI {l_1}{l_2} x s t} T l}
        \end{mathpar}
        Similarly, we use the inner IH to obtain the goal. 
      \end{itemize}
      In general, when we know the form of a term, an inner induction must reveal only
      three cases to consider. %
      This pattern appears a lot when we consider cases for types.
      
    \item
      We have this case
      \begin{mathpar}
        \inferrule
        {t \redd t'}
        {{\Elt{l}t} \redd {\Elt l{t'}}}
      \end{mathpar}
      Then by IH, we have
      \[
        \lttyequiv i t{t'}{\Ty l}{1 + l}
      \]
      We obtain the goal by the congruence rule for $\tEl$.
      
    \end{itemize}
    
  \item
    \begin{mathpar}
      \inferrule
      {\lttyping i t{T'}l \\ \lttypeq{\typeof i}T{T'}l}
      {\lttyping i t T l}
    \end{mathpar}
    By IH, we have
    \[
      \lttyequiv i{t}{t'}{T'}l
    \]
    We obtain the goal by the conversion rule.
    
  \item
    \begin{mathpar}
      \inferrule
      {\lttyping i tT{l'} \\ \tyequiv[L]l{l'}\Level}
      {\lttyping i t T l}
    \end{mathpar}
    By IH, we have
    \[
      \lttyequiv i{t}{t'}{T}{l'}
    \]
    We obtain the goal by the conversion rule.

  \item
    For the recursive principle for natural numbers, there are three subcases after
    inverting the reduction premise. %
    We apply $\beta$ rules or the congruence rule properly.

    The same principle applies for the recursive principles for code, but a bit more
    complex. %
    We will use theorems from \Cref{sec:dt:syn:cov} in combination of the congruence
    rules to obtain our goals. 
    
  \end{itemize}
\end{proof}

As a corollary,
\begin{lemma}[Preservation] $ $
  \begin{itemize}
  \item If $\lttypwf i T l$ and $T \redd T'$, then $\lttypwf i {T'} l$.
  \item If $\lttyping i t T l$ and $t \redd t'$, then $\lttyping i {t'} T l$.
  \end{itemize}
\end{lemma}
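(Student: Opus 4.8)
The plan is to read off this lemma as a direct corollary of the Soundness lemma together with the Presupposition lemma. First I would dispatch the two computable layers, $i \in \{\proglevel, \metalevel\}$. Suppose $\lttypwf i T l$ and $T \redd T'$. Since $\compt i$ holds, Soundness applies and yields $\lttypeq i T{T'} l$; the type-equivalence clause of Presupposition then gives exactly $\lttypwf i{T'}l$. The term case is the same pattern: from $\lttyping i t T l$ and $t \redd t'$ with $\compt i$, Soundness gives $\lttyequiv i t{t'} T l$, and the term-equivalence clause of Presupposition delivers $\lttyping i{t'} T l$.

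It remains to cover the non-computable layers $i \in \{\varlevel, \codelevel\}$, for which Soundness is not directly available. The case $i = \varlevel$ is vacuous: no type-formation rule is available at layer $\varlevel$, and a well-typed term at layer $\varlevel$ is necessarily a local or global variable, hence neutral and irreducible, so neither hypothesis can be instantiated. For $i = \codelevel$ I would route through layer $\proglevel$ using Lifting and Unlifting: from $\lttypwf \codelevel T l$, \Cref{lem:dt:lift} gives $\lttypwf \proglevel T l$; since $\compt \proglevel$, the argument of the previous paragraph gives $\lttypwf \proglevel{T'}l$; and the Unlifting lemma then brings this back down to $\lttypwf \codelevel{T'}l$. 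The term case is analogous, using the term clauses of Lifting and Unlifting. Intuitively this works because the only reductions that fire on well-typed code at layer $\codelevel$ are the $\tEl$ decoding steps, and those are exactly the judgments Unlifting is able to transport.

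The only subtlety worth flagging is that the conclusion must sit at the \emph{same} layer as the hypothesis, which is what forces the detour through $\proglevel$ at layer $\codelevel$ instead of simply appealing to the lifted judgment; there is no genuine difficulty beyond this bookkeeping. Once Soundness and Presupposition are established, Preservation is immediate at the computable layers, and the Lifting/Unlifting pair closes the remaining gap.
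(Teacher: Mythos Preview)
Your proposal is correct and matches the paper's proof essentially step for step: case-split on $i$, handle $\varlevel$ vacuously, route $\codelevel$ through $\proglevel$ via Lifting/Unlifting, and for the computable layers apply Soundness followed by Presupposition. One small remark: your closing intuition that ``the only reductions that fire on well-typed code at layer $\codelevel$ are the $\tEl$ decoding steps'' is not quite right for terms (all MLTT $\beta$-redexes can fire there), but this is inessential since Unlifting transports arbitrary layer-$\proglevel$ typing judgments, not just $\tEl$-shaped ones.
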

\begin{proof}
  We analyze $i$. %
  If $i = \varlevel$, then there is no applicable reduction rule. %
  If $i = \codelevel$, then we use lifting to lift $i$ to $\proglevel$, then use the soundness,
  presupposition and unlifting lemmas to obtain the goals. %
  Otherwise, we use the soundness lemma and the presupposition lemma. %
\end{proof}

The substitution lemmas require the well-formedness of types and the well-typedness of
terms to make use of algebraic laws of substitutions.
\begin{lemma}[Universe Substitutions] $ $
  Given $\compt i$,
  \begin{itemize}
  \item if $\lttypwf[\Psi][\Gamma][L'] i T l$, $T \redd T'$ and $\typing[L]{\phi}{L'}$, then $T[\phi] \redd T'[\phi]$;
  \item if $\lttyping[\Psi][\Gamma][L'] i t T l$, $t \redd t'$ and
    $\typing[L]{\phi}{L'}$, then $t[\phi] \redd t'[\phi]$.
  \end{itemize}
\end{lemma}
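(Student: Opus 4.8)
The plan is to do a mutual induction on the two reduction derivations $T \redd T'$ and $t \redd t'$, carrying the accompanying typing judgment only to feed the induction hypotheses: whenever a congruence rule fires on a subterm $s \redd s'$, I invert the typing of the surrounding redex (using the presupposition lemma where needed) to obtain a well-typedness judgment for $s$ before applying the IH. For a congruence rule the reassembly is immediate once we observe that $[\phi]$ preserves the syntactic class that makes the rule applicable: applying $[\phi]$ to a type in weak head normal form $W$ (one of $\Nat$, $\PI\cdots$, $\Ty l$, $\UPI\cdots$, $\CPI\cdots$, $\TPI\cdots$, $\CTyp\cdots$, $\CTrm\cdots$) yields another such $W$, and applying $[\phi]$ to an eliminator leaves the scrutinee in its place, so e.g. from the IH $T[\phi] \redd T'[\phi]$ we get $\ELIMTRM{l_1}{l_2}Mb{l'}\Delta{(T[\phi])}{(t[\phi])} \redd \ELIMTRM{l_1}{l_2}Mb{l'}\Delta{(T'[\phi])}{(t[\phi])}$, which is exactly $(\ELIMTRM{l_1}{l_2}Mb{l'}\Delta T t)[\phi] \redd (\ELIMTRM{l_1}{l_2}Mb{l'}\Delta {T'} t)[\phi]$ unfolded. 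The hypothesis $\compt i$ is needed only to know, in the computation cases below, that the relevant $\beta$ rules are available.

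\textbf{The $\tEl$ rules.} For each reduction of a type, one pushes $[\phi]$ through both sides using that $[\phi]$ commutes with the level operations ($0[\phi] = 0$, $(1+l)[\phi] = 1 + l[\phi]$, $(l \sqcup l')[\phi] = l[\phi] \sqcup l'[\phi]$) and distributes structurally over $\Nat$, $\Ty l$, $\PI l{l'}xst$ and $\Elt l t$. Thus $(\Elt 0 \Nat)[\phi] = \Elt 0 \Nat \redd \Nat = \Nat[\phi]$, $(\Elt{1+l}{\Ty l})[\phi] = \Elt{1 + l[\phi]}{\Ty{l[\phi]}} \redd \Ty{l[\phi]}$, and $(\Elt{l \sqcup l'}{\PI l{l'}xst})[\phi]$ reduces by the same rule to $\PI{l[\phi]}{l'[\phi]}x{\Elt{l[\phi]}{s[\phi]}}{\Elt{l'[\phi]}{t[\phi]}}$, which is exactly the $[\phi]$-image of $\PI l{l'}x{\Elt l s}{\Elt{l'}t}$; the $\Elt$ congruence is discharged by the IH.

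\textbf{The $\beta$ rules that fire a substitution.} These are the substantive cases: recursion on $\Nat$, dependent-function $\beta$, universe-polymorphic $\beta$, the meta-function $\beta$s for $\CAPP{}/\CLAM{}$ and $\TAPP{}/\TLAM{}$, the two $\tletbox$ $\beta$s, and the recursive-principle $\beta$s. In each, the definition of $[\cdot]$ pushes $\phi$ through the redex, extending it to $[\phi,\vect\ell/\vect\ell]$ under a bound universe variable and leaving term/context binders untouched; what remains is to commute $[\phi]$ past the substitution produced on the right. For the local- and global-substitution $\beta$s (natural-number recursion, function $\beta$, $\CAPP{}/\CLAM{}$, $\TAPP{}/\TLAM{}$, $\tletbox$) this is precisely \Cref{lem:dt:interact}, e.g. $\CAPP{(\CLAM l g t)}\Delta \redd t[\Delta/g]$ gives $(\CAPP{(\CLAM l g t)}\Delta)[\phi] = \CAPP{(\CLAM{l[\phi]}g{t[\phi]})}{(\Delta[\phi])} \redd t[\phi][\Delta[\phi]/g] = t[\Delta/g][\phi]$. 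For the universe-polymorphic $\beta$, $(\ULAM l \ell t)\,\$\,\vect l \redd t[\vect l/\vect\ell]$ gives after $[\phi]$ the redex $(\ULAM{l[\phi,\vect\ell/\vect\ell]}\ell{t[\phi,\vect\ell/\vect\ell]})\,\$\,(\vect l[\phi])$, which reduces to $t[\phi,\vect\ell/\vect\ell][\vect l[\phi]/\vect\ell]$, and this equals $t[\vect l/\vect\ell][\phi]$ by the composition law for universe substitutions together with the fact that $\phi$ mentions no $\vect\ell$. For the recursive-principle $\beta$s one additionally notes that $[\phi]$ on the motives and branches is defined to substitute $[\phi,\vect\ell/\vect\ell]$ under the bound universe variables, so the branch selected on the $[\phi]$-reduced redex is exactly the $[\phi]$-image of the branch selected on the original redex, and the (large, simultaneous) substitution of $\Delta$, the captured $U$/$u$ sub-structures and the recursive-call subterms commutes with $[\phi]$ again by \Cref{lem:dt:interact}, using that $\phi$ touches none of those freshly introduced variables.

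\textbf{Main obstacle.} The only genuinely painful part is the bookkeeping in the recursor $\beta$-rule cases: there the right-hand side fires a combined universe substitution and a many-component global substitution, and one must check, component by component, that conjugating it by $[\phi]$ reproduces the substitution fired by the $[\phi]$-reduced redex. This is routine given \Cref{lem:dt:interact} and the disjointness of $\phi$ from the bound variables, but it is long; I would spell out one representative case (say the $\Pi$-encoding branch of $\ELIMTYP{}{}{}{}{}{}{}$, which already exhibits a universe binder, two captured type variables and two recursive calls) and leave the remaining branches, which follow the identical pattern, to the reader. No new ideas are required beyond the algebraic substitution lemmas already established.
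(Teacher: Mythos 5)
The paper states this lemma without an explicit proof, noting only that the substitution lemmas rely on the algebraic laws and therefore need the well-formedness hypotheses; your argument supplies the expected filling and is correct in substance. Induction on the single-step reduction, carrying the typing judgment so that the IH and the algebraic laws (\Cref{lem:dt:interact} plus the composition and identity laws for universe substitutions) become applicable, is exactly the natural decomposition, and your handling of the universe-polymorphic $\beta$ case — pushing $\phi$ under the $\vect\ell$ binder as $\phi,\vect\ell/\vect\ell$ and exploiting that $\phi$ mentions none of the $\vect\ell$ — is the genuinely non-routine bit.

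Two small inaccuracies, neither of which changes the shape of the proof. First, your gloss that $\compt i$ is needed ``so the $\beta$ rules are available'' is off: the reduction relation is deliberately untyped, so the $\beta$ rules fire purely syntactically regardless of layer. The role of $\compt i$ is rather that the carried typing is at a computational layer; inverting that typing then always produces sub-typings at layers $\proglevel$ or $\metalevel$ (since $\tbox$'ed contents are normal forms and reduction never descends into them), which is what keeps the IH applicable along congruence cases and keeps the algebraic laws, stated under well-typedness, applicable to the pieces fed into them. Second, ``invert the typing of the surrounding redex'' understates the difficulty: because of the conversion rule and the level-subsumption rule, a single-step inversion does not exist, and you would need the same generalized inner-induction device the paper employs in its Soundness-of-reduction proof (generalize the expected type, then induct on the typing derivation, peeling off conversion/subsumption applications). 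Both points are bookkeeping rather than substance, and the overall argument is sound.
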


\begin{lemma}[Local Substitutions] $ $
  Given $\compt i$,
  \begin{itemize}
  \item if $\lttypwf[\Psi][\Delta] i T l$, $T \redd T'$ and $\ltsubst i \delta \Delta$, then $T[\delta] \redd T'[\delta]$;
  \item if $\lttyping[\Psi][\Delta] i t T l$, $t \redd t'$ and
    $\ltsubst i \delta \Delta$, then $t[\delta] \redd t'[\delta]$.
  \end{itemize}
\end{lemma}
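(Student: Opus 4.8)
The plan is to proceed by mutual induction on the two reduction derivations $T \redd T'$ and $t \redd t'$ (they are mutually referential through the congruences $\Elt l t \redd \Elt l {t'}$ and the $\ELIMTRM$ congruence that reduces its type index), keeping the well-formedness/well-typedness hypothesis around so that at every step we may invert it to expose the typing of the relevant subterms; this is what licenses the algebraic laws of local substitution. Equivalently one could induct on the typing derivations and invert the reduction, mirroring the Soundness lemma, but I prefer the reduction-induction form since it keeps the case analysis smallest. The hypothesis $\compt i$ is used only to restrict $i \in \{\proglevel, \metalevel\}$, so that every reduction and computation rule is available; in particular nothing inside a $\boxit{\cdot}$, where $\codelevel$- or $\varlevel$-typed code sits, is ever touched by $\redd$ or by $[\delta]$.

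First I would clear the type-level rules: the three $\tEl$-computation rules and the congruence $\Elt l t \redd \Elt l {t'}$. These are immediate because $[\delta]$ leaves universe levels alone and commutes structurally through $\Nat$, $\Ty l$, $\tEl$ and $\PI{}{}{}{}{}$ by the definition of local substitution; e.g.\ $(\Elt{l \sqcup l'}{\PI l{l'} x s t})[\delta]$ is again a $\tEl$-redex reducing to $\PI l{l'} x {\Elt l {s[\delta]}}{\Elt{l'}{(t[\delta,x/x])}}$, which is literally $(\PI l{l'} x {\Elt l s}{\Elt{l'} t})[\delta]$, and the congruence follows from the IH on $t \redd t'$ after inverting $\lttypwf i {\Elt l t} l$. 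For terms, all the congruence rules (for $\telimn$, application, $\UAPP$, $\CAPP$, $\TAPP$, the two $\tletbox$ forms, and the three recursor congruences) follow one recipe: invert the typing of the elimination form to get the typing of the head subterm — and of the type index, in the $\ELIMTRM$ congruence that reduces $T$ — apply the IH, and reassemble with the same rule, using that $[\delta]$ propagates into the head position while leaving the static annotations ($\Delta$, $T$, the motives modulo the $[\delta,x_?/x_?]$ bookkeeping) intact. The $\beta$-rules for $\Nat$, dependent functions, universe-polymorphic functions, and the $\CPI$/$\TPI$-functions reduce to the observation that $t[\delta]$ is again a redex of the same shape, with the equality of the two contracta an instance of the algebra and associativity of local substitutions — typically $t[\delta,x/x][\id,s[\delta]/x] = t[\delta,s[\delta]/x] = t[\id,s/x][\delta]$ — and with \Cref{lem:dt:interact} additionally invoked for the $\beta$-rules that fire a global substitution ($[\Delta/g]$, $[T/U]$, $[t/u]$) so as to slide $\delta$ past it, exactly as in the $t_\lambda$ case of \Cref{lem:dt:lsubst-eq}; the two $\tletbox$ rules go the same way since $\delta$ never enters the $\boxit{\cdot}$ scrutinee.

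The main obstacle is the family of $\beta$-rules for the recursive principles for code (the $\Nat$, variable, $\Pi$, $\lambda$, $\tapp$, $\telimn$, \dots{} cases of the two recursors). There $[\delta]$ acts trivially on the scrutinee $\boxit{\cdot}$ and on the indices $\Delta, T$, but it inserts the bookkeeping substitutions $[\delta, x_?/x_?]$ into the motives $\vect M$ and branches $\vect b$, while the contractum is a branch body $t_{\textsf{br}}$ substituted simultaneously with (i) universe levels $\vect l/\vect\ell$, (ii) a global substitution $\sigma$ carrying $\Delta$ and the captured sub-structures, and (iii) a local substitution $\delta_{\textsf{rec}}$ of recursive calls, each of which is itself a recursor over the same $\vect M$, $\vect b$. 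Establishing $t[\delta] \redd t'[\delta]$ thus comes down to an equation of the schematic form $t_{\textsf{br}}[\delta, x_?/x_?][\vect l/\vect\ell,\, \sigma,\, \delta_{\textsf{rec}}[\delta]] = \bigl(t_{\textsf{br}}[\vect l/\vect\ell,\, \sigma,\, \delta_{\textsf{rec}}]\bigr)[\delta]$, which I would prove by repeated use of \Cref{lem:dt:interact} to commute $[\delta]$ past the universe and global substitutions — using that $\delta$ mentions neither the bound universe variables nor the global variables of $\sigma$ — followed by naturality and associativity of local substitutions to merge $[\delta]$ into $\delta_{\textsf{rec}}$; the well-formedness required to justify these laws is obtained by inverting the typing premise $G_A$ of the recursor together with Preservation. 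This is purely mechanical and structurally identical to computation-rule cases already handled in \Cref{lem:dt:lsubst-eq} and the Global Substitutions lemma, so I would write out one representative branch in full (say $t_\lambda$ or $t_\tapp$, whose reduction recursively invokes the type recursor on the sub-structures $S$ and $T$) and assert the remaining branches by the same argument.
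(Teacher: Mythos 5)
The paper states this lemma without proof (it sits alongside the universe/global substitution lemmas for reduction, all stated without argument), so there is no authorial proof to compare against. Your proposal is sound and, I believe, is the intended argument or a close dual of it.

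The decomposition is right: mutual induction on the two reduction derivations, holding the typing hypotheses in reserve for inversion; the congruence rules pass through by IH once the typing of the head subterm is recovered; the $\beta$-rules reduce to identities in the substitution algebra (the computation $t[\delta,x/x][\id,s[\delta]/x]=t[\delta,s[\delta]/x]=t[\id,s/x][\delta]$ is exactly what the associativity and identity laws give you); and the recursor $\beta$-rules require sliding $[\delta]$ past a universe substitution and a global substitution using \Cref{lem:dt:interact}, then merging with the recursive-call substitution $\delta_{\textsf{rec}}$ by naturality/associativity. That last calculation is literally the chain of equalities the paper performs in the $t_\lambda$ case of its (unlabeled) Local Substitutions lemma for typing; your citation points at \Cref{lem:dt:lsubst-eq}, which is the adjacent ``Equivalent Local Substitutions'' lemma and does not actually contain a $t_\lambda$ case, so adjust that reference, but the appeal to that style of calculation is correct. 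One point worth making explicit rather than leaving implicit: when you invert the typing hypothesis to expose the typings of subterms, the derivation may end in a conversion or universe-level-conversion step; you need the standard generalized-inversion move used in the paper's Soundness proof (``generalize the type and do an inner induction, yielding the direct rule plus the two conversion rules''), or equivalently the inversion lemmas of \Cref{sec:dt:syn:cov}, to discharge this. Duality with the paper's nearby Soundness proof --- which inducts on typing and inverts reduction --- is real: that route keeps the inversion trivial but pays with a larger outer case split; yours keeps the case split minimal (one case per reduction rule) but pays with the generalized-inversion argument above. Both work; neither buys or loses anything substantial here.
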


\begin{lemma}[Global Substitutions] $ $
  Given $\compt i$,
  \begin{itemize}
  \item if $\lttypwf[\Phi] i T l$, $T \redd T'$ and $\ptyping \sigma \Phi$, then $T[\sigma] \redd T'[\sigma]$;
  \item if $\lttyping[\Phi] i t T l$, $t \redd t'$ and
    $\ptyping \sigma \Phi$, then $t[\sigma] \redd t'[\sigma]$.
  \end{itemize}
\end{lemma}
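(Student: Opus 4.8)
The plan is to prove the two statements simultaneously by mutual induction on the derivations of $\lttypwf[\Phi] i T l$ and $\lttyping[\Phi] i t T l$, inverting the reduction $T \redd T'$ (resp. $t \redd t'$) in each case; the typing hypothesis is carried along purely to license the algebraic substitution laws (recall the reductions are untyped, which is exactly why these lemmas need typed subjects). The easy cases split three ways. For the congruence rules — $\ELIMN l{x.M}s{x,y. s'}t \redd \ELIMN l{x.M}s{x,y. s'}{t'}$ from $t \redd t'$, the congruence for $\tEl$, and likewise those for $\APP t l{l'}x S T s$, $\UAPP t l$, $\CAPP t\Delta$, $\TAPP t T$, $\tletbox$, and the two recursors — the global substitution $[\sigma]$ commutes with the surrounding term former definitionally, so after applying the IH (to the sub-reduction, using the sub-derivation of the typing hypothesis obtained by inversion) a single unfolding of $[\sigma]$ closes the goal. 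For the type-reduction rules decoding $\tEl$ (e.g.\ $\Elt{l\sqcup l'}{\PI l{l'}x s t}\redd\PI l{l'}x{\Elt l s}{\Elt{l'}t}$), $[\sigma]$ again merely propagates into the encodings and the same rule fires. For the ordinary MLTT $\beta$-rules (application, $\telimn$, universe-polymorphic application) the contractum carries local or universe substitutions, which we commute past $[\sigma]$ using naturality from \Cref{lem:dt:interact} together with the algebra of global substitutions, plus the observation that $\sigma$ does not mention the bound variables of the contractum.

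The genuinely modal $\beta$-rules — $\CAPP{(\CLAM l g t)}\Delta\redd t[\Delta/g]$, $\TAPP{(\TLAM{l}{l'}U t)}{T}\redd t[T/U]$, $\LETBTYP{l'}l\Delta{x_T.M}{U}{t'}{(\boxit T)}\redd t'[T/U]$, and $\LETBTRM{l'}l\Delta T{x_t.M}{u}{t'}{(\boxit t)}\redd t'[t/u]$ — are handled by unfolding the definition of $[\sigma]$ on $\CLAM{}{}{}$, $\TLAM{}{}{}$, $\tletbox$ (which all push $\sigma$ under the binder as $\sigma,g/g$, resp.\ $\sigma,U^\id/U$, $\sigma,u/u$) and then using the composition and identity laws for global substitutions — together with \Cref{lem:dt:wkact} and the corollary on acting on identities — to rewrite $(\sigma,g/g)\circ(\id,\Delta[\sigma]/g)$ as $\sigma$ with $\Delta[\sigma]/g$ appended, and symmetrically on the right-hand side. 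This is precisely the bookkeeping already carried out in the global-substitution lemma for typing, so no new ideas are required; the only point to be careful about is that, on reductions, we hold no typing derivation for the contractum, so we invoke \Cref{lem:dt:interact} only at the sub-terms whose well-typedness is delivered by inversion of the typing hypothesis.

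The main obstacle I expect is the family of $\beta$-rules for the recursive principles for code, e.g.\ the $\Pi$ rule $\ELIMTYP{l_1}{l_2}M b{(l\sqcup l')}\Delta{(\boxit{\PI{l}{l'}x S T})}\redd t_\Pi[\cdots]$ and the $\lambda$ and $\tapp$ rules, where the contractum is a \emph{simultaneous} substitution of universe levels, a context, several global type/term variables, and the recursive-call local substitution, with the sub-recursors $s_S,s_T$ (themselves instances of $\ELIMTYP{l_1}{l_2}M b{l}\Delta{(\boxit{S})}$) appearing among the substitutends. Pushing $[\sigma]$ through requires interleaving \Cref{lem:dt:interact}, \Cref{lem:dt:wkact}, and the associativity/identity algebra of global substitutions — exploiting that the $\vect\ell$ do not occur in $\sigma$, that $\sigma$ is globally/universe-weakened wherever it meets an $\id$-substitution, and that $[\sigma]$ commutes into $s_S,s_T$ by the very same manipulation. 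These calculations are long but are exact transcriptions of the ones already done in the typing global-substitution lemma, so the proof is routine but lengthy; I would present the $\Pi$ recursor case in detail and state that the remaining recursor rules follow the identical pattern. A corollary for $\reds$ then follows by iterating.
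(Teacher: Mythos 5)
Your proposal is sound. The paper states this lemma (together with the universe- and local-substitution analogues for reduction) without proof, remarking only that well-typedness is needed to license the algebraic laws; your proposed mutual induction on the typing derivations, inverting the reduction in each case and invoking \Cref{lem:dt:interact,lem:dt:wkact} together with the composition/identity algebra of global substitutions, is exactly the expected fill-in. You correctly single out the recursor $\beta$-cases as the heaviest bookkeeping, and correctly observe that they replay the calculations already carried out explicitly in the global-substitution lemma for typing.

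Two small things worth making explicit in a full write-up. First, because you induct on typing, the derivation may end with a type- or universe-conversion rule, which changes neither the term nor the reduction, so the IH applies directly to the premise — you do not mention this, but it is the case that lets the induction go through cleanly without committing you to an inversion lemma up front. Second, the $\beta$-cases require inverting the typing of the redex head (the abstraction or $\tbox$ sitting under the eliminator), possibly through conversion rules, to obtain the sub-typings that license \Cref{lem:dt:interact} on the contractum; this is routine but is the point where the ``typed subjects'' hypothesis actually earns its keep. An equivalent, arguably more direct organization is to induct on the reduction derivation itself and recover the needed sub-typings by inversion of the typing hypothesis; the algebraic manipulations are identical either way, so the choice is stylistic.
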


All lemmas above also work for the reflexive transitive closure versions of
reduction. 

\begin{lemma}[Determinacy] $ $
  \begin{itemize}
  \item If $T \redd T'$ and $T \redd T'$, then $T' = T''$. 
  \item If $t \redd t'$ and $t \redd t''$, then $t' = t''$. 
  \end{itemize}
\end{lemma}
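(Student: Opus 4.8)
The plan is to prove both statements simultaneously by mutual induction on the derivation of one of the two reductions, say $T \redd T'$ (resp.\ $t \redd t'$), inverting the other. Since the reduction relation is \emph{untyped}, this is a purely syntactic argument: it uses neither typing, nor preservation, nor soundness, only a careful inspection of the finite rule set. (We read the two displayed conclusions as: if $T \redd T'$ and $T \redd T''$ then $T' = T''$, and if $t \redd t'$ and $t \redd t''$ then $t' = t''$.)

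Two facts, each established by inspection of the rules, drive every case. First, \textbf{weak head normal forms are irreducible}: no reduction rule has a whnf $W$ or $w$ on its left-hand side --- in particular none of $\Nat$, $\Ty l$, a $\Pi$-type encoding, a $\lambda$-abstraction, a $\boxit{\cdot}$, any of the three flavours of $\Lambda$-abstraction, a universe-polymorphic function type, a context- or type-meta-function type, or a contextual type, reduces. Second, \textbf{the left-hand-side patterns of the rules are pairwise disjoint}: the computation ($\beta$) rules are separated from one another by the head constructor of the principal subterm being destructed (for example $\ze$ versus $\su t$ for the recursor on $\Nat$; the various $\boxit{\cdot}$ shapes selected by the recursive principles for code; and $\lambda$-, $\Lambda$- or $\boxit{\cdot}$-headed arguments in the respective elimination forms), and each $\beta$ rule is disjoint from the congruence rules for the same elimination form, because a congruence rule requires that very principal subterm to reduce, which by the first fact it cannot.

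With these in hand the case analysis is routine. If $T \redd T'$ (resp.\ $t \redd t'$) is a computation step at the head, the term has a fixed top-level shape whose principal subterm is a constructor or whnf; inverting the second reduction, the only rule matching this shape is the same $\beta$ rule, and as its contractum is a fixed substitution instance we obtain $T'' = T'$. If instead the step is a congruence rule reducing a distinguished head subterm $s$ via $s \redd s_1$, then the second reduction can be neither a $\beta$ rule (that would force $s$ to be a constructor or whnf, contradicting $s \redd s_1$) nor a congruence rule at a different position (the head position of each elimination form is unique), so it is the same congruence rule, reducing $s$ via $s \redd s_2$; the induction hypothesis applied to $s \redd s_1$ and $s \redd s_2$ yields $s_1 = s_2$, hence $T'' = T'$. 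The one spot needing an extra word is the recursive principle for code of terms, which carries two congruence rules --- one reducing the index type, one reducing the scrutinee under the side condition that the index type is already a whnf $W$ --- but these are mutually exclusive exactly because a whnf $W$ is irreducible.

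I expect the main obstacle to be purely the bookkeeping: verifying disjointness of the left-hand-side patterns and irreducibility of normal forms over the whole rule set, most notably the many $\beta$ rules for the two recursive principles for code, where one must confirm that each branch is picked out by a syntactically distinct combination of the scrutinee's $\boxit{\cdot}$-shape (and, for the term recursor, the shape of the index type). A minor subtlety, handled exactly as in \Cref{lem:dt:static}, is that the universe-level annotations occurring in the $\tEl$-redices (such as in $\Elt 0 \Nat$ and its siblings) are matched modulo the equivalence of universe levels, which is decidable by \Cref{sec:dt:ulevel}; this does not affect determinacy of the contractum.
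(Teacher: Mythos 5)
Your proof is correct, and it follows the same route the paper intends: in the simply-typed section the corresponding single-step uniqueness lemma is proved by ``induction on $\ltyred t{t'}T$ and analyze $\ltyred t{t''}T$,'' and for the dependently typed determinacy lemma the paper simply states it without a written proof, relying on the same argument. Your two driving observations --- that weak head normal forms (and, inductively, neutral forms) are irreducible since only the principal position of each elimination form carries a congruence rule, and that the $\beta$-rule redex patterns are pairwise disjoint from each other and from the congruence rules --- are precisely what makes that induction go through, and your note on the two-stage congruence for $\ELIMTRM$ correctly isolates the one spot where the interaction is nontrivial.

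One small clarification on your final remark: the reduction rules are read as matching \emph{syntactically}, so determinacy is an entirely syntactic statement and there is no choice being made modulo level equivalence (if the annotation on an $\Elt$ redex is not literally $0$, $1+l$, or $l \sqcup l'$ the rule simply does not fire, and the term is stuck --- a progress concern handled separately by preservation and typing, not a determinacy concern). The appeal to level equivalence, as in \Cref{lem:dt:static}, matters for static-code equalities but is not needed here; the contractum of each $\tEl$ rule is in any case a function of the inner encoding alone, so even under a looser reading the result would be unchanged.
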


If a multi-step reduction reaches a normal form, then we know this normal form is also
uniquely determined:
\begin{lemma}[Determinacy] $ $
  \begin{itemize}
  \item If $T \reds W$ and $T \reds W'$, then $W = W'$. 
  \item If $t \reds w$ and $t \reds w'$, then $w = w'$. 
  \end{itemize}
\end{lemma}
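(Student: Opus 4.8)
The plan is to derive both clauses from two ingredients that are already available: the single-step \emph{Determinacy} lemma proved immediately above, and the fact that weak head normal forms and neutral forms never reduce. So I would first establish the auxiliary lemma that there is no $W'$ with $W \redd W'$, no $V'$ with $V \redd V'$, no $w'$ with $w \redd w'$, and no $\nu'$ with $\nu \redd \nu'$, by a single mutual structural induction over the grammars of $W$, $V$, $w$, and $\nu$. The canonical cases ($\Nat$, $\Ty l$, $\PI{l}{l'}xST$, $\CPI g l T$, the various $\lambda$-forms, $\ze$, $\su t$, $\boxit T$, $\boxit t$, and so on) are immediate, since no $\beta$-rule and no congruence rule has a left-hand side with such a head. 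In the neutral cases the inductive hypothesis does the work: e.g.\ for $\APP \nu l{l'}xSTs$ the $\beta$-rule for application cannot fire because it requires the head to be a $\LAM{l}{l'}xSt$, while the only matching congruence rule reduces the head and so would need $\nu \redd \nu'$, contradicting the hypothesis; likewise the neutral type $\Elt l \nu$ cannot head-reduce since the $\tEl$-rules demand a canonical argument, and its only congruence rule needs $\nu \redd \nu'$; and the recursor neutrals, including the forms blocked on a global variable $\ELIMTRM{l_1}{l_2}Mbl\Gamma T {(\boxit{u^\delta})}$ and $\ELIMTYP{l_1}{l_2}Mbl\Gamma {(\boxit{U^\delta})}$, are handled the same way, using that a $\boxit{u^\delta}$ or $\boxit{U^\delta}$ scrutinee matches no $\beta$-rule and that the congruence rules on the scrutinee or on the type are excluded by the hypothesis.

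With that lemma in hand I would prove the statement by induction on the length of one of the two reduction sequences, say $T \reds W$. If it has length zero then $T = W$, which is irreducible, so the other sequence $T \reds W'$ must also have length zero and hence $W' = W$. If it is $T \redd T_1 \reds W$, then $T \reds W'$ cannot have length zero (otherwise $W' = T$ would be irreducible even though $T \redd T_1$), so it factors as $T \redd T_1' \reds W'$; the single-step Determinacy lemma gives $T_1 = T_1'$, and the induction hypothesis applied to $T_1 \reds W$ and $T_1 \reds W'$ yields $W = W'$. The argument for the term clause ($t \reds w$ versus $t \reds w'$, using irreducibility of $w$) is identical.

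I expect the main obstacle to be the auxiliary irreducibility lemma --- not because it is deep, but because it requires going through every clause of the $W/V/w/\nu$ grammars against every $\beta$-rule and every congruence rule, including the congruence rules for $\tletbox$ and for the two recursive principles for code, and being careful to pair each neutral subterm with the right inductive hypothesis (and to notice that the blocked recursor forms on global variables genuinely match no $\beta$-rule). Everything after that is a short induction on reduction length that uses nothing beyond single-step determinacy; the companion results (\emph{Soundness}, \emph{Preservation}, and the substitution lemmas for $\redd$) are orthogonal and are not needed here.
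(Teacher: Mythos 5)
Your proposal matches the paper's proof exactly: the paper's one-line argument is ``Induction. Use the fact that weak head normal forms do not reduce and determinacy of single-step reduction,'' which is precisely your decomposition into (i) an irreducibility lemma for $W/V/w/\nu$ and (ii) an induction on the length of the reduction sequence using single-step determinacy. You have simply spelled out the details the paper leaves implicit.
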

\begin{proof}
  Induction. Use the fact that weak head normal forms do not reduce and determinacy of
  single-step reduction. 
\end{proof}

Due to preservation, we often are interested in keeping track of well-formedness and
well-typedness of types and terms. %
Therefore it is convenient to give the following convenient auxiliary judgments:
\begin{mathpar}
  \inferrule
  {\lttypwf i T l \\ T \redd T'}
  {\ttypred i T{T'} l}

  \inferrule
  {\lttypwf i T l \\ T \reds T'}
  {\ttypreds i T{T'} l}

  \inferrule
  {\lttyping i t T l \\ t \redd t'}
  {\ttrmred i t{t'} T l}

  \inferrule
  {\lttyping i t T l \\ t \reds t'}
  {\ttrmreds i t{t'} T l}
\end{mathpar}

\subsection{Convertibility Checking}

The convertibility checking is standard: we first reduce types or terms to their
weak head normal forms using reduction, and then recursively compare the
sub-structures. %
Either we detect a mismatch which causes a failure, or everything checks out and the
convertibility is verified.

Following this line, we give the following judgments for convertibility checking. %
Here we always quantify $\compt i$. %
The layering index $i$ restricts only types (i.e. those in MLTT or in \delamlang), but
not terms. %
In other words, it is possible for convertibility checking to relate at layer $\proglevel$ two
terms only well-typed at layer $\metalevel$, as long as these two terms have type well-formed
at layer $\proglevel$ (i.e. MLTT). %
This is a critical property to establish a relation between the logical relations at
both layers. %
\begin{itemize}
\item $\tconvtyp i T{T'} l$ denotes that $T$ and $T'$ are convertible at universe level $l$. 
\item $\tconvtypnf i W{W'} l$ denotes that $W$ and $W'$ are convertible normal types.
\item $\tconvtypne i V{V'}l$ denotes that $V$ and $V'$ are convertible neutral types.
\item $\tconvctx i \Gamma \Delta$ denotes that $\Gamma$ and $\Delta$ are convertible contexts. %
  This judgment is defined by using $\tconvtyp i T{T'} l$ pairwise.
\item $\tconvtrm i t{t'}T l$ denotes that $t$ and $t'$ of type $T$ are convertible.
\item $\tconvtrmnf i w{w'}W l$ denotes that $w$ and $w'$ are convertible normal terms of a normal
  type $W$.
\item $\tconvtrmne i \nu{\nu'}T l$ denotes that $\nu$ and $\nu'$ are convertible
  neutral terms. %
  $T$ is the result of inference. %
\item $\tconvtrmnee i \nu{\nu'}W l$ denotes that $\nu$ and $\nu'$ are convertible neutral terms of
  a normal type $W$. %
  $W$ is the result of inference. 
\item $\tconvsub i{\delta}{\delta'}\Delta$ denotes that $\delta$ and $\delta'$ are convertible
  local substitutions. %
  This judgment is defined by using $\tconvtrm i t{t'}T l$ pairwise. 
\end{itemize}

We give the following convertibility checking rules for types first:
\begin{mathpar}
  \inferrule
  {\ttypreds i T W l \\ \ttypreds i{T'}{W'} l \\ \tconvtypnf i W{W'} l}
  {\tconvtyp i T{T'} l}

  \inferrule
  {\lpjudge i \Gamma}
  {\tconvtypnf i \Nat{\Nat} 0}

  \inferrule
  {\lpjudge i \Gamma \\ \tyequiv[L]{l}{l'}{\Level}}
  {\tconvtypnf i {\Ty l}{\Ty{l'}}{1 + l}}

  \inferrule
  {\tconvtyp i S{S'} l \\ \tconvtyp[\Psi][\Gamma, x : S \at l] i T{T'}{l'}}
  {\tconvtypnf i{\PI l{l'}x{S}{T}}{\PI l{l'}x{S'}{T'}}{l \sqcup l'}}

  \inferrule
  {\tconvtypne iV{V'}{l}}
  {\tconvtypnf i{V}{V'}{l}}

  \inferrule
  {\tconvtypnf i{W}{W'}{l'} \\ \tyequiv{l}{l'}\Level}
  {\tconvtypnf i{W}{W'}{l}}

  \inferrule
  {\lpjudge i \Gamma \\ U : \DTyp[\Delta]{i'} l \in \Psi \\ i' \in \{\codelevel, \proglevel\} \\ i' \le i \\ \tconvsub i \delta{\delta'} \Delta}
  {\tconvtypne i{U^\delta}{U^{\delta'}}{l}}

  \inferrule
  {\tyequiv[L]{l}{l'}\Level \\ \tconvtrmnee i \nu{\nu'}{\Ty l}{1 + l}}
  {\tconvtypne i{\Elt l \nu}{\Elt{l'}{\nu'}}{l}}

  \inferrule
  {\tconvtypne iV{V'}{l'}  \\ \tyequiv[L]{l}{l'}\Level}
  {\tconvtypne iV{V'}{l}}
\end{mathpar}
For the types only available at layer $\metalevel$:
\begin{mathpar}
  \inferrule
  {\lpjudge \metalevel \Gamma \\ \tconvtyp[\Psi][\Gamma][L, \vect\ell] \metalevel{T}{T'}{l} \\ \tyequiv[L,\vect\ell]{l}{l'}{\Level}}
  {\tconvtypnf \metalevel{\UPI \ell l T}{\UPI \ell{l'}{T'}}\omega}
  
  \inferrule
  {\lpjudge \metalevel \Gamma \\ \tconvtyp[\Psi, g : \Ctx] \metalevel{T}{T'}{l} \\ \tyequiv[L]{l}{l'}{\Level}}
  {\tconvtypnf \metalevel{\CPI g l T}{\CPI g{l'}{T'}}{l}}

  \inferrule
  {\lpjudge \metalevel \Gamma \\ \tconvtyp[\Psi, U : \DTyp[\Delta] \proglevel{l_1}] \metalevel{T}{T'}{l_2} \\ \tconvctx \proglevel{\Delta}{\Delta'} \\ \tyequiv[L]{l_1}{l_3}{\Level} \\ \tyequiv[L]{l_2}{l_4}{\Level}}
  {\tconvtypnf \metalevel{\TPI U [\Delta]{l_1}{l_2} T}{\TPI U[\Delta']{l_3}{l_4}{T'}}{l_2}}

  \inferrule
  {\lpjudge \metalevel \Gamma \\ \tconvctx \proglevel{\Delta}{\Delta'} \\ \tyequiv[L]{l}{l'}{\Level}}
  {\tconvtypnf \metalevel{\CTyp[\Delta]l}{\CTyp[\Delta']{l'}}{0}}  

  \inferrule
  {\lpjudge \metalevel \Gamma \\ \tconvctx \proglevel{\Delta}{\Delta'} \\ \tconvtyp[\Psi][\Delta]\proglevel
    T{T'}l \\ \tyequiv[L]{l}{l'}{\Level}}
  {\tconvtypnf \metalevel{\CTrm[\Delta] Tl}{\CTrm[\Delta']{T'}{l'}}{0}}
\end{mathpar}

We propagate the convertibility for types pairwise to obtain the convertibility for local contexts.
\begin{mathpar}
  \inferrule
  {\judge[L] \Psi}
  {\tconvctx i \cdot\cdot}

  \inferrule
  {\judge[L] \Psi \\ g : \Ctx \in \Psi}
  {\tconvctx i{g}{g}}

  \inferrule
  {\tconvctx i \Gamma\Delta \\ \tconvtyp i T{T'} l \\ \tyequiv[L]l{l'}\Level}
  {\tconvctx i{\Gamma, x : T \at l}{\Delta, x : T' \at{l'}}}
\end{mathpar}

The convertibility of terms proceeds similarly. %
The following are checking rules that are available at both layers:
\begin{mathpar}
  \inferrule
  { T \reds W \\ \ttrmreds i t w T l \\ \ttrmreds i
    {t'}{w'}T l \\ \tconvtrmnf i w{w'}W l}
  {\tconvtrm i t{t'}T l}

  \inferrule
  {\lpjudge i \Gamma}
  {\tconvtrmnf i \Nat{\Nat}{\Ty 0}{1}}

  \inferrule
  {\lpjudge i \Gamma \\ \tyequiv[L]{l}{l'}{\Level}}
  {\tconvtrmnf i {\Ty l}{\Ty{l'}}{\Ty{1 + l}}{2 + l}}

  \inferrule
  {\tconvtrm i s{s'}{\Ty l}{1 + l} \\ \tconvtrm[\Psi][\Gamma, x : \Elt l s \at l] i t{t'}{\Ty{l'}}{1 + l'}}
  {\tconvtrmnf i{\PI l{l'}x{s}{t}}{\PI l{l'}x{s'}{t'}}{\Ty{l \sqcup l'}}{1 + {(l \sqcup l')}}}

  \inferrule
  {\lpjudge i \Gamma}
  {\tconvtrmnf i \ze \ze \Nat 0}

  \inferrule
  {\tconvtrmnf i t{t'}\Nat 0}
  {\tconvtrmnf i{\su t}{\su{t'}}\Nat 0}

  \inferrule
  {\tconvtrmnee i \nu{\nu'}{\Nat} 0}
  {\tconvtrmnf i \nu{\nu'}\Nat 0}

  \inferrule
  {\lttypwf i S l \\ \lttyping i w {\PI{l}{l'}x{S}{T}}{l \sqcup l'} \\ \lttyping i {w'}{\PI{l}{l'}x{S}{T}}{l \sqcup l'} \\ \tconvtrm[\Psi][\Gamma, x : S \at{l}] i {\APP{w}{l}{l'}x{S}{T}{x}}{\APP{w'}{l}{l'}x{S}{T}{x}}{T}{l'}}
  {\tconvtrmnf i{w}{w'}{\PI{l}{l'}x{S}{T}}{l \sqcup l'}}

  \inferrule
  {\tconvtrmnee i \nu{\nu'}{W} l}
  {\tconvtrmnf i \nu{\nu'}V l}

  \inferrule
  {\tconvtrmnf i w{w'}W{l'} \\ \tyequiv[L]{l}{l'}\Level}
  {\tconvtrmnf i w{w'}W l}
\end{mathpar}

The following rules check terms that are available only at layer $\metalevel$:
\begin{mathpar}
  \inferrule
  {\lttyping \metalevel w {\UPI \ell l T}{\omega} \\ \lttyping \metalevel {w'}{\UPI \ell l T}{\omega} \\ \tconvtrm[\Psi][\Gamma][L, \vect\ell] \metalevel{\UAPP t{\ell}}{\UAPP{t'}{\ell}}{T}{l}}
  {\tconvtrmnf \metalevel{w}{w'}{\UPI \ell l T}{\omega}}

  \inferrule
  {\lttyping \metalevel w{\CPI g l T}{l} \\ \lttyping \metalevel {w'}{\CPI g l T}{l} \\ \tconvtrm[\Psi, g : \Ctx] \metalevel{\CAPP w{g}}{\CAPP{w'}{g}}{T}{l}}
  {\tconvtrmnf \metalevel{w}{w'}{\CPI g l T}{l}}

  \inferrule
  {\lttyping \metalevel w {\TPI U [\Delta]{l}{l'} T}{l'} \\ \lttyping \metalevel {w'}{\TPI U [\Delta]{l}{l'} T}{l'} \\ \tconvtrm[\Psi, U : \DTyp[\Delta] \proglevel{l}] \metalevel{\TAPP w{U^{\id_\Delta}}}{\TAPP{w'}{U^{\id_\Delta}}}{T}{l'}}
  {\tconvtrmnf \metalevel{w}{w'}{\TPI U [\Delta]{l}{l'} T}{l'}}

  \inferrule
  {\lpjudge \metalevel \Gamma \\ \lttypwf[\Psi][\Delta] \codelevel T l \\ T = T'}
  {\tconvtrmnf \metalevel{\boxit T}{\boxit{T'}}{\CTyp[\Delta] l}{0}}

  \inferrule
  {\tconvtrmnee \metalevel \nu{\nu'}{\CTyp[\Delta] l}{0}}
  {\tconvtrmnf \metalevel \nu{\nu'}{\CTyp[\Delta] l}{0}}

  \inferrule
  {\lpjudge \metalevel \Gamma \\ \lttyping[\Psi][\Delta] \codelevel t T l \\ t = t'}
  {\tconvtrmnf \metalevel{\boxit t}{\boxit{t'}}{\CTrm[\Delta] T l}{0}}

  \inferrule
  {\tconvtrmnee \metalevel \nu{\nu'}{\CTrm[\Delta] T l}{0}}
  {\tconvtrmnf \metalevel \nu{\nu'}{\CTrm[\Delta] T l}{0}}
\end{mathpar}
Notice here convertibility of $\tbox$'ed types and terms are checked simply with
syntactic equality. %
The convertibility of neutral terms proceeds as follows. %
Similarly, we first give the checking rules that are available at both layers:
\begin{mathpar}
  \inferrule
  {\tconvtrmne i \nu{\nu'}{T}{l} \\ T \reds W}
  {\tconvtrmnee i \nu{\nu'}W l}

  \inferrule
  {\tconvtrmne i \nu{\nu'}T{l'} \\ \tyequiv[L]{l}{l'}\Level}
  {\tconvtrmne i \nu{\nu'}T l}

  \inferrule
  {\lpjudge i \Gamma \\ x : T \at l \in \Gamma}
  {\tconvtrmne i x{x}T l}

  \inferrule
  {\lpjudge i \Gamma \\ u : \DTrm[\Delta]{i'}T l \in \Psi \\ i' \in \{\varlevel, \codelevel\} \\ i' \le i \\ \tconvsub i \delta{\delta'} \Delta}
  {\tconvtrmne i{u^\delta}{u^{\delta'}}{T[\delta]}{l}}

  \inferrule
  {\tyequiv[L]l{l'}\Level \\ \tconvtyp[\Psi][\Gamma, x : \Nat \at 0] i M{M'} l \\
    \tconvtrm i {s_1}{s_3} {M[\ze/x]}l \\
    \tconvtrm[\Psi][\Gamma, x : \Nat \at 0, y : M \at l] i{s_2}{s_4}{M[\su x/x]}l \\
    \tconvtrmnee i \nu{\nu'} \Nat 0}
  {\tconvtrmne i{\ELIMN l{x.M}{s_1}{x,y. s_2}\nu}{\ELIMN{l'}{x.M'}{s_3}{x,y. s_4}{\nu'}}{M[\nu/x]}{l}}

  \inferrule
  {\tyequiv[L]{l_1}{l_3}\Level  \\ \tyequiv[L]{l_2}{l_4}\Level \\ \tconvtyp i S{S'}{l_1}
    \\
    \tconvtyp[\Psi][\Gamma, x : S \at{l_1}] i{T}{T'}{l_2} \\
    \tconvtrmnee i \nu{\nu'}{\PI {l_1}{l_2} x{S''}{T''}}{l_1 \sqcup l_2} \\ \tconvtrm i s{s'} S{l_1}}
  {\tconvtrmne i{\APP \nu {l_1}{l_2} x S T s}{\APP{\nu'}{l_3}{l_4} x {S'}{T'}{s'}}{T[s/x]}{l_2}}
\end{mathpar}
When checking applications, we simply ignore the type inferred by checking $\nu$ and
$\nu'$. %
This is fine because we already know $\nu$ and $\nu'$ are well-typed so the type
annotations must be equivalent.

Then we give the rules only available at layer $\metalevel$:
\begin{mathpar}
  \inferrule
  {\tconvtrmnee \metalevel{\nu}{\nu'}{\UPI \ell l T}{\omega} \\ |\vect\ell| = |\vect l| = |\vect
    l'| > \codelevel \\
    \forall \codelevel \le n < |\vect l| ~.~ \tyequiv[L]{\vect l(n)}{\vect l'(n)}\Level}
  {\tconvtrmne \metalevel{\UAPP \nu l}{\nu'~\$~\vect l'}{T[\vect l/\vect \ell]}{l[\vect l/\vect \ell]}}

  \inferrule
  {\tconvtrmnee \metalevel{\nu}{\nu'}{\CPI g l T}{l} \\ \tconvctx \proglevel \Delta{\Delta'}}
  {\tconvtrmne \metalevel{\CAPP \nu \Delta}{\CAPP{\nu'}{\Delta'}}{T[\Delta/g]}{l}}

  \inferrule
  {\tconvtrmnee \metalevel{\nu}{\nu'}{\TPI U[\Delta] l{l'}{T''}}{l'} \\ \tconvtyp[\Psi][\Delta] \proglevel{T}{T'}l}
  {\tconvtrmne \metalevel{\TAPP \nu{T}}{\TAPP{\nu'}{T'}}{T''[T/U]}{l'}}

  \inferrule
  {\lpjudge \metalevel \Gamma \\ \tyequiv[L]{l_1}{l_3}\Level \\ \tyequiv[L]{l_2}{l_4}\Level \\ \tconvctx \proglevel \Delta{\Delta'} \\
    \tconvtrmnee \metalevel{\nu}{\nu'}{\CTyp[\Delta]{l_2}}{0} \\
    \tconvtyp[\Psi][\Gamma,x_T : \CTyp[\Delta]{l_2} \at{0}]\metalevel{M}{M'}{l_1} \\
  \tconvtrm[\Psi, U : \DTyp[\Delta]\codelevel{l_2}]\metalevel{t_1}{t_2}{M[\boxit U/x_T]}{l_1}}
  {\tconvtrmne \metalevel{\LETBTYP{l_1}{l_2} \Delta{x_T.M}{U}{t_1}\nu}{\LETBTYP{l_3}{l_4}{\Delta'}{x_T.M'}{U}{t_2}{\nu'}}{M[t/x_T]}{l_1}}

  \inferrule
  {\lpjudge \metalevel \Gamma \\ \tyequiv[L]{l_1}{l_3}\Level \\ \tyequiv[L]{l_2}{l_4}\Level \\ \tconvctx \proglevel \Delta{\Delta'} \\
    \tconvtyp \proglevel T{T'}{l_2} \\
    \tconvtrmnee \metalevel{\nu}{\nu'}{\CTrm[\Delta]T{l_2}}{0} \\
    \tconvtyp[\Psi][\Gamma,x_T : \CTrm[\Delta]T{l_2} \at{0}]\metalevel{M}{M'}{l_1} \\
    \tconvtrm[\Psi, u : \DTrm[\Delta]\codelevel T{l_2}]\metalevel{t_1}{t_2}{M[\boxit u/x_t]}{l_1}}
  {\tconvtrmne \metalevel{\LETBTRM{l_1}{l_2} \Delta{T}{x_t.M}{u}{t_1}\nu}{\LETBTRM{l_3}{l_4}{\Delta'}{T'}{x_T.M'}{u}{t_2}{\nu'}}{M[t/x_t]}{l_1}}
\end{mathpar}
The remaining piece of the convertibility checking for neutral recursive principles. %
The recursive principles get stuck when the scrutinees are neutral or $\tbox$'ed
global variables. %
To check the convertibility of neutral recursive principles, we recursively check the
convertibility between motives, corresponding branches and the indexing universe
levels, local contexts and potentially types. %
To derive the following two conclusions:
\begin{gather*}
  \tconvtrmne \metalevel{\ELIMTYP{l_1}{l_2}Mbl\Delta
    \nu}{\ELIMTYP{l_3}{l_4}{M'}{b'}{l'}{\Delta'}{\nu'}}{M_\Typ[\nu/x_T]}{l_1} \\
  \tconvtrmne \metalevel{\ELIMTRM{l_1}{l_2}Mbl\Delta T \nu}{\ELIMTYP{l_3}{l_4}{M'}{b'}{l'}{\Delta'}{T'}{\nu'}}{M_\Trm[\nu/x_t]}{l_1}
\end{gather*}
We proceed by checking the convertibility of motives:
\begin{mathpar}
  \tyequiv[L]{l_1}{l_3}\Level

  \tyequiv[L]{l_2}{l_4}\Level
  
  \tconvtyp[\Psi, g : \Ctx][\Gamma, x_T : \CTyp[g]\ell \at 0][L,\ell] \metalevel {M_\Typ}{M_\Typ'}{l_1}

  \tconvtyp[\Psi, g : \Ctx, U_T : \DTyp[g]\proglevel\ell][\Gamma, x_t : \CTrm[g]{U_T^\id}\ell
  \at 0][L,\ell] \metalevel{M_\Trm}{M_\Trm'}{l_2}
\end{mathpar}
We do the same for all the branches as well. %
Following the previous conventions, we group all these checking into $C_A$ for
\textbf{\codelevel}onvertibility checking for all premises. %
Then what we have left is to make sure the scrutinees are convertible.
\begin{mathpar}
  \inferrule
  {C_A \\ \tyequiv[L]{l}{l'}\Level \\ \tconvctx \proglevel{\Delta}{\Delta'} \\ \tconvtrmnee \metalevel{\nu}{\nu'}{\CTyp[\Delta]{l}}{0}}
  {\tconvtrmne \metalevel{\ELIMTYP{l_1}{l_2}Mbl\Delta \nu}{\ELIMTYP{l_3}{l_4}{M'}{b'}{l'}{\Delta'}{\nu'}}{M_\Typ[\nu/x_T]}{l_1}}

  \inferrule
  {C_A \\ \tyequiv[L]{l}{l'}\Level \\ \tconvctx \proglevel{\Delta}{\Delta'} \\ \tconvtyp \proglevel
    T{T'}{1 + l} \\ \tconvtrmnee \metalevel{\nu}{\nu'}{\CTrm[\Delta]T{l}}{0}}
  {\tconvtrmne \metalevel{\ELIMTRM{l_1}{l_2}Mbl\Delta T \nu}{\ELIMTYP{l_3}{l_4}{M'}{b'}{l'}{\Delta'}{T'}{\nu'}}{M_\Trm[\nu/x_t]}{l_1}}
\end{mathpar}

If the scrutinees are $\tbox$'ed global variables, then the check is always the same,
except that the global variables are compared syntactically:
\begin{mathpar}
  \inferrule
  {C_A \\ \tyequiv[L]{l}{l'}\Level \\ \tconvctx \proglevel{\Delta}{\Delta'} \\ U :
    \DTyp[\Delta]{\codelevel} l \in \Psi  \\ \ltsubst \codelevel \delta \Delta}
  {\tconvtrmne \metalevel{\ELIMTYP{l_1}{l_2}Mbl\Delta{(\boxit{U^\delta})}}{\ELIMTYP{l_3}{l_4}{M'}{b'}{l'}{\Delta'}{(\boxit{U^\delta})}}{M_\Typ[\boxit{U^\delta}/x_T]}{l_1}}

  \inferrule
  {C_A \\ \tyequiv[L]{l}{l'}\Level \\ \tconvctx \proglevel{\Delta}{\Delta'} \\ \tconvtyp \proglevel
    T{T'}{1 + l} \\ u : \DTrm[\Delta]{i'}T l \in \Psi \\ i' \in \{\varlevel, \codelevel\} \\ \ltsubst \codelevel \delta \Delta}
  {\tconvtrmne \metalevel{\ELIMTRM{l_1}{l_2}Mbl\Delta T{(\boxit{u^\delta})}}{\ELIMTYP{l_3}{l_4}{M'}{b'}{l'}{\Delta'}{T'}{(\boxit{u^\delta})}}{M_\Trm[\boxit{u^\delta}/x_t]}{l_1}}
\end{mathpar}
Now we have finished all the convertibility rules for neutral terms.

We simply let the convertibility for terms to propagate pairwise to derive the
convertibility for local substitutions:
\begin{mathpar}
  \inferrule
  {\lpjudge i\Gamma \\ \text{$\Gamma$ ends with $\cdot$} \\ |\Gamma| = \metalevel}
  {\tconvsub i {\cdot^\metalevel}{\cdot^\metalevel}{\cdot}}

  \inferrule
  {\lpjudge i\Gamma \\ g : \Ctx \in \Psi \\ \text{$\Gamma$ ends with $g$} \\ |\Gamma| = \metalevel}
  {\tconvsub i {\cdot_g^\metalevel}{\cdot_g^\metalevel}{\cdot}}

  \inferrule
  {\lpjudge i\Gamma \\ g : \Ctx \in \Psi \\\\ \text{$\Gamma$ ends with $g$} \\ |\Gamma| = \metalevel}
  {\tconvsub i {\wk_g^\metalevel}{\wk_g^\metalevel}{g}}

  \inferrule
  {\tconvsub i {\delta}{\delta'}{\Delta} \\\\ \lttypwf[\Psi][\Delta]i T l \\ \tconvtrm i {t}{t'}{T[\delta]} l}
  {\tconvsub i {\delta, t/x}{\delta', t'/x}{\Delta, x : T \at l}}
\end{mathpar}

The convertibility algorithm is obtained by reading all the components for
convertibility rules as inputs and the neutral judgments consider types as outputs. %
If there is no corresponding rule, then two terms are not convertible; otherwise, two
terms are convertible. %
We verify some basic properties as follows:
\begin{lemma}[Soundness] Assuming $\compt i$,
  \begin{itemize}
  \item if $\tconvtyp i T{T'} l$, then $\lttypeq i T{T'} l$;
  \item if $\tconvtypnf i W{W'} l$, then $\lttypeq i W{W'} l$;
  \item if $\tconvtypne i V{V'}l$, then $\lttypeq i{V}{V'}l$;
  \item if $\tconvctx i \Gamma \Delta$, then $\lpequiv i \Gamma \Delta$;
  \item if $\tconvtrm i t{t'}T l$, then $\lttyequiv i t{t'} T l$;
  \item if $\tconvtrmnf i w{w'}W l$, then $\lttyequiv i w{w'} W l$;
  \item if $\tconvtrmne i \nu{\nu'}T l$, then $\lttyequiv i\nu{\nu'} T l$;
  \item if $\tconvtrmnee i \nu{\nu'}W l$, then $\lttyequiv i \nu{\nu'} W l$;
  \item if $\tconvsub i{\delta}{\delta'}\Delta$, then $\ltsubeq i
    {\delta}{\delta'}{\Delta}$. 
  \end{itemize}
\end{lemma}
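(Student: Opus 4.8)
The plan is to prove all nine statements simultaneously by mutual induction on the derivations of the algorithmic judgments $\tconvtyp i T{T'} l$, $\tconvtypnf i W{W'} l$, $\tconvtypne i V{V'}l$, $\tconvctx i \Gamma \Delta$, $\tconvtrm i t{t'}T l$, $\tconvtrmnf i w{w'}W l$, $\tconvtrmne i \nu{\nu'}T l$, $\tconvtrmnee i \nu{\nu'}W l$, and $\tconvsub i{\delta}{\delta'}\Delta$. As elsewhere in this section I take the well-formedness of the ambient contexts and the well-typedness of the subjects for granted, so that every auxiliary judgment occurring as a premise can be fed directly into the declarative rules and so that the layer index stays fixed at $i$ throughout. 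The workhorses are the soundness lemma for reduction together with its reflexive--transitive closure variant (turning $\ttypreds i T W l$ into $\lttypeq i T W l$ and $\ttrmreds i t w T l$ into $\lttyequiv i t w T l$), the preservation lemma, the presupposition lemma, reflexivity of the declarative equivalences, and the universe-level equivalence rules of \Cref{sec:dt:ulevel}.

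For the rules that first reduce a type (resp.\ a term) to weak head normal form and then defer to the normal-form comparison, I would combine soundness of $\reds$ on both sides with the inductive hypothesis and transitivity: from $\ttypreds i T W l$, $\ttypreds i{T'}{W'}l$ and $\tconvtypnf i W{W'}l$ one obtains $\lttypeq i T W l$, $\lttypeq i{T'}{W'}l$, and (by IH) $\lttypeq i W{W'}l$, whence $\lttypeq i T{T'}l$ by symmetry and transitivity; the term rule is identical, the extra premise $T\reds W$ merely pinning down the normal type at which the comparison happens. The purely structural congruence rules --- for $\Nat$, $\Ty l$, the encoding $\PI{l}{l'}xst$, $\ze$, $\su{(-)}$, $\Elt l\nu$, $U^\delta$, $u^\delta$, $\telimn$- and application-neutrals, the constructors $\CPI g l T$, $\TPI U[\Delta]l{l'}T$, $\UPI\ell l T$, $\CTyp[\Delta]l$, $\CTrm[\Delta]Tl$, and the pointwise rules defining local-context and local-substitution convertibility --- are discharged by applying the inductive hypotheses to the sub-derivations and invoking the matching declarative congruence rule, with the level side conditions supplied verbatim and the layer/level conversion rules used to absorb any discrepancy the algorithm tolerates.

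The cases demanding real care are those where the algorithm performs an $\eta$-expansion. For $\tconvtrmnf i w{w'}{\PI l{l'}xST}{l \sqcup l'}$ the inductive hypothesis yields $\lttyequiv[\Psi][\Gamma, x : S \at l]i{\APP{w}{l}{l'}{x}{S}{T}{x}}{\APP{w'}{l}{l'}{x}{S}{T}{x}}{T}{l'}$; I would apply the congruence rule for $\lambda$-abstraction to this, then the declarative $\eta$-rule for $\Pi$ to each of $w$ and $w'$ --- its well-typedness premises obtained by presupposition (and, where needed, inversion) on the algorithm's hypotheses $\lttyping i w{\PI l{l'}xST}{l \sqcup l'}$ and $\lttyping i{w'}{\PI l{l'}xST}{l \sqcup l'}$ --- and close by transitivity; the normal-form rules for $\UPI\ell l T$, $\CPI g l T$, and $\TPI U[\Delta]l{l'}T$ go the same way with their respective $\eta$-rules. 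For the neutral elimination rules --- function, universe-polymorphic, context- and type-application, and the $\tletbox$-forms --- the algorithm discards the type produced by recursively inferring the scrutinee's head; here I would use presupposition and preservation to learn that $\nu$ and $\nu'$ are well-typed at a $\Pi$-type (respectively a universe-polymorphic or $\square$-meta-function type), and then an inversion argument --- essentially the neutral specialisation of the coverage and progress analysis of \Cref{sec:dt:syn:cov}, or plain inversion on the typing of a neutral --- to conclude that the two recorded type annotations are equivalent, so that the congruence rule applies with matching annotations. The $\tbox$-rules are easy: the enclosed code is compared by syntactic equality (modulo universe levels, as in \Cref{lem:dt:static}), so reflexivity of the declarative equivalence plus the congruence rule for $\tbox$ suffices. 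Finally the neutral recursive-principle cases, $\ELIMTYP{l_1}{l_2}Mbl\Delta\nu$ and $\ELIMTRM{l_1}{l_2}Mbl\Delta T\nu$ together with their $\tbox$'ed-global-variable variants, are long but mechanical: apply the inductive hypotheses to the two motives, to each branch collected in $C_A$, and to the scrutinee, and propagate the declarative congruence rules for the recursors.

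The main obstacle I anticipate is the $\eta$ machinery: the declarative $\eta$-rules carry well-typedness premises (well-formedness of the domain and codomain in the right context, well-typedness of the function) that the algorithm does not keep around explicitly, so each $\eta$-case has to rebuild these from the scrutinee's type via presupposition, preservation, and inversion, and then apply the rule at exactly the layer where the relevant $\Pi$- or meta-function type is well-formed. Coupled with this is the inversion needed for the neutral-elimination rules, where one must show that the type annotations the algorithm threw away were forced to be equivalent by the (assumed) well-typedness of the two neutrals; threading the layer indices consistently through these reconstructions --- rather than the congruence bookkeeping itself --- is the delicate part of the argument.
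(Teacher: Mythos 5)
Your proposal is correct and matches the paper's proof, which is stated very tersely (``Mutual induction. Use $\eta$ rules for all kinds of function types. Use congruence rules, presupposition and conversion rules when checking neutral terms.''). You spell out the same strategy in full detail---mutual induction, closing each weak-head reduction step via soundness of $\reds$ and transitivity, discharging congruence cases directly, $\eta$-expanding at each function-type normal form and closing by transitivity, and using presupposition plus conversion to reconcile the types discarded by the inference-mode neutral rules---so the only difference is thoroughness; one minor observation is that your anticipated obstacle of ``rebuilding'' the well-typedness side-conditions for the $\eta$-rules is slightly overstated, since the algorithmic $\eta$-rules already carry those typing premises explicitly.
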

\begin{proof}
  Mutual induction.  %
  Use $\eta$ rules for all kinds of function types. %
  Use congruence rules, presupposition and conversion rules when checking neutral
  terms. %
\end{proof}

\begin{lemma} $ $
  \begin{itemize}
  \item If $\tconvtypnf i W{W'} l$, then $\tconvtyp i W{W'} l$. 
  \item If $\tconvtrmnf i w{w'}W l$, then $\tconvtrm i w{w'} W l$. 
  \end{itemize}
\end{lemma}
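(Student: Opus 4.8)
The plan is to exploit the fact that a weak head normal form reduces to itself in zero steps, so that the single generic rule that forms $\tconvtyp i T{T'} l$ out of $\tconvtypnf i W{W'} l$ (and dually, the rule that forms $\tconvtrm i t{t'}T l$ out of $\tconvtrmnf i w{w'}W l$) applies with $T := W$ and $T' := W'$. The only work is to recover the well-formedness and well-typedness side conditions that are packaged inside the $\ttypreds$ and $\ttrmreds$ judgments but are not among the hypotheses of this lemma. As throughout this section, I assume $\compt i$.

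For the type statement I would first invoke the soundness lemma just proved: from $\tconvtypnf i W{W'} l$ we obtain $\lttypeq i W{W'} l$, and the presupposition lemma for \delamlang then gives $\lttypwf i W l$ and $\lttypwf i{W'} l$. Since $\reds$ is reflexive we have $W \reds W$ and $W' \reds W'$, hence $\ttypreds i W W l$ and $\ttypreds i{W'}{W'} l$. Feeding these two derivations together with the hypothesis $\tconvtypnf i W{W'} l$ into the rule that concludes $\tconvtyp i T{T'} l$ from $\ttypreds i T W l$, $\ttypreds i{T'}{W'} l$ and $\tconvtypnf i W{W'} l$ (instantiated at $T := W$, $T' := W'$) yields $\tconvtyp i W{W'} l$. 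The term statement is handled identically: soundness gives $\lttyequiv i w{w'}W l$ from $\tconvtrmnf i w{w'}W l$, and presupposition gives $\lttyping i w W l$ and $\lttyping i{w'}W l$. The type $W$ appearing here is already a weak head normal form for types, so $W \reds W$; likewise $w \reds w$ and $w' \reds w'$, so $\ttrmreds i w w W l$ and $\ttrmreds i{w'}{w'}W l$ hold, and the rule concluding $\tconvtrm i t{t'}T l$ from $T \reds W$, $\ttrmreds i t w T l$, $\ttrmreds i{t'}{w'}T l$ and $\tconvtrmnf i w{w'}W l$ then produces $\tconvtrm i w{w'}W l$ with $T := W$, $t := w$, $t' := w'$.

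There is no real obstacle here: the lemma is essentially reflexivity of $\reds$ combined with the structural conversion rules. The one point that requires a little care is that the reduction judgments $\ttypreds$ and $\ttrmreds$ are not purely about reduction but also carry well-formedness/typing obligations, so the proof genuinely depends on the preceding soundness lemma (hence on $\compt i$) and on presupposition in order to discharge those obligations; without them one could not even form the zero-step reduction derivations needed to fire the rules.
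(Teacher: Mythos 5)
Your proof is correct, and given that the paper states this lemma without a proof, there is nothing to compare against; the argument you give is exactly what one would expect and covers the one real point of work, namely recovering the well-formedness obligations packed into $\ttypreds$ and $\ttrmreds$ via the preceding Soundness lemma (hence under $\compt i$, which is indeed the ambient assumption in this section) and Presupposition, then firing the constructor rule with zero-step reductions. One trivial remark: you appeal to $W \reds W$ both ``because $W$ is a normal form'' and ``because $\reds$ is reflexive''; only the latter is needed (reflexivity holds for any term), so the normal-form observation is harmless but unnecessary.
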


Other lemmas like PER require the fundamental theorems so we postpone their proofs
until we have the semantic models. 


\section{Logical Relations for \delamlang}\labeledit{sec:dt:logrel}

Previously, we have given the judgments of \delamlang, verified its syntactic
properties and given its reduction and convertibility algorithms. %
Starting this section, we establish the logical relation and prove the (weak)
normalization and convertibility properties of \delamlang. %
Following \citet{abel_decidability_2017}, we proceeds as follows:
\begin{itemize}
\item First we give a set of generic equivalence conditions for a parameterized
  discussion of the logical relation.
\item Then we give the definition of the Kripke logical relations of types and
  terms. %
  The logical relations are parameterized by layers. %
  In this step, we are only concerned about types that are available at all layers,
  i.e. those in MLTT and unrelated to meta-programming. 
\item Then we give the definition  of the Kripke logical relations of local contexts
  and local substitutions. %
\item Then we branch off two orthogonal developments.
  \begin{itemize}
  \item We give the definition of the Kripke logical relations of global contexts and
    global substitutions.
  \item We give the definition of the Kripke logical relations of types and terms,
    again. %
    But in this case, we must also give the definition for types that are related to
    meta-programming, i.e. contextual types. %
  \end{itemize}
  In fact, the definitions given by the two sub-steps above must consider each
  other. %
  Otherwise, we will not able to extend related global substitutions during the proof
  of the fundamental theorems. 
\item Next we give the semantic judgments. %
  The semantic judgments require types, terms, etc. to be stable under all universe,
  global and local substitutions. %
\item Finally, we establish the fundamental theorems for the semantic judgments. %
  Instantiating the generic equivalence gives us the proof of convertibility.
\end{itemize}
Due to layering, following \Cref{sec:cv:logrel}, the generic equivalence, logical
relation and validity judgments are all layered. %
In fact, since computation exists at both layers $\proglevel$ and $\metalevel$, the situation is very
complex. %
\citet{abel_decidability_2017} instantiate their generic equivalence twice to obtain
the decidability of convertibility checking, and we will also be doing the same. %
Due to the complication of layering, our fundamental theorems must talk about all
layers. %
The difficulties of the logical relations lie in that how we can support code running
and recursions on code at the same time and justify them in the semantics. %

\subsection{Generic Equivalence}

Similar to \Cref{sec:cv:logrel}, we first quantify four generic equivalence relations,
which will be instantiated to syntactic equivalence and convertibility later, and
their laws. %
This step provides modularity to logical relation argument: we simply instantiate this
generic equivalence to obtain different versions of the fundamental theorems. %
Due to dependent types, we define generic equivalence over $i \in \{\proglevel, \metalevel\}$:
\begin{itemize}
\item $\lttypgneeq i V{V'} l$ describes a generic type equivalence between two neutral
  types at universe level $l$ at layer $i$.
\item $\lttypgeq i {T}{T'} l$ describes a generic type equivalence between two types at
  universe level $l$ at layer $i$.
\item $\lttrmgneeq i \nu{\nu'} T l$ describes a generic type equivalence between two neutral
  terms of type $T$ at universe level $l$ at layer $i$. 
\item $\lttrmgeq i t{t'} T l$ describes a generic type equivalence between two
  terms of type $T$ at universe level $l$ at layer $i$. 
\end{itemize}

From the four generic equivalence, we induce two equivalence of local contexts and
local substitutions by using the generic equivalence pairwise:
\begin{mathpar}
  \inferrule
  {\judge[L] \Psi}
  {\ltctxgeq i \cdot\cdot}

  \inferrule
  {\judge[L] \Psi \\ g : \Ctx \in \Psi}
  {\ltctxgeq i{g}{g}}

  \inferrule
  {\ltctxgeq i \Gamma\Delta \\ \lttypgeq i T{T'} l \\ \tyequiv[L]l{l'}\Level}
  {\ltctxgeq i{\Gamma, x : T \at l}{\Delta, x : T' \at{l'}}}

  \inferrule
  {\lpjudge i\Gamma \\ \text{$\Gamma$ ends with $\cdot$} \\ |\Gamma| = \metalevel}
  {\ltsubgeq i {\cdot^\metalevel}{\cdot^\metalevel}{\cdot}}

  \inferrule
  {\lpjudge i\Gamma \\ g : \Ctx \in \Psi \\ \text{$\Gamma$ ends with $g$} \\ |\Gamma| = \metalevel}
  {\ltsubgeq i {\cdot_g^\metalevel}{\cdot_g^\metalevel}{\cdot}}

  \inferrule
  {\lpjudge i\Gamma \\ g : \Ctx \in \Psi \\\\ \text{$\Gamma$ ends with $g$} \\ |\Gamma| = \metalevel}
  {\ltsubgeq i {\wk_g^\metalevel}{\wk_g^\metalevel}{g}}

  \inferrule
  {\ltsubgeq i {\delta}{\delta'}{\Delta} \\\\ \lttypwf[\Psi][\Delta]i T l \\
    \lttrmgeq i {t}{t'}{T[\delta]} l}
  {\ltsubgeq i {\delta, t/x}{\delta', t'/x}{\Delta, x : T \at l}}
\end{mathpar}

The generic equivalence and the logical relations are invariant under all
weakenings. %
Therefore, we should make these notions clear here. %
Since there are three different contexts, we have three corresponding kinds of
weakenings. %
In particular, $\theta :: L \To L'$ is the universe weakening. %
Following previous conventions, $\gamma :: L \sep \Psi \To_g \theta$ is a global
weakening, and $\tau :: L \sep \Psi;\Gamma \To_i \Delta$ is a local weakening. %
The subscript $i$ denotes which layer the contexts $\Gamma$ and $\Delta$ live in. %
We can simultaneously weaken all three contexts at the same time. %
We simply apply $\theta$, $\gamma$ and $\tau$ in this order. %
We let $\psi$ represent this triple:
\[
  \psi := (\theta, \gamma, \tau) :: L \sep \Psi; \Gamma \To_i L' \sep \Phi;\Delta
\]
where
\begin{align*}
  \theta &:: L \To L' \\
  \gamma &:: L' \sep \Psi[\theta] \To_g \Phi \\
  \tau &:: L' \sep \theta; \Gamma[\theta][\gamma] \To_i \Delta
\end{align*}
Similarly we let
\[
  \alpha := (\theta, \gamma) :: L \sep \Psi \To L' \sep \Phi
\]
We can apply weakenings like substitutions to universe levels, types, terms, contexts
and substitutions as expected. %
The action is to shift the variables according to the specified weakenings. %
This is a standard action, despite having three separate notions, so we take it for
granted here. %
When it is clear from the context, we do not write down the weakening action at all to
avoid clutter. 

Then we give the laws of the generic equivalence. %
Since the generic equivalence at layer $\metalevel$ subsumes that at layer $\proglevel$, we first give
the laws that hold for both layers, and then incrementally add those that only hold at
layer $\metalevel$.

\begin{law}[Subsumption]$ $
  \begin{itemize}
  \item If $\lttypgneeq i V{V'} l$, then $\lttypgeq i V {V'} l$. 
  \item If $\lttypgeq i T{T'} l$, then $\lttypeq i T{T'} l$. 
  \item If $\lttrmgneeq i \nu{\nu'} T l$, then $\lttrmgeq i \nu{\nu'} T l$.
  \item If $\lttrmgeq i t{t'} T l$, then $\lttyequiv i t{t'} T l$.
  \end{itemize}
\end{law}

As a lemma, subsumption propagates to contexts and local substitutions:
\begin{lemma}[Subsumption] $ $
  \begin{itemize}
  \item If $\ltctxgeq i \Gamma \Delta$, then $\lpequiv i \Gamma \Delta$. 
  \item If $\ltsubgeq i \delta {\delta'} \Delta$, then $\ltsubeq i \delta {\delta'} \Delta$.  
  \end{itemize}
\end{lemma}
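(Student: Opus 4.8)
The plan is to prove both statements by a straightforward mutual induction on the derivations of $\ltctxgeq i \Gamma \Delta$ and $\ltsubgeq i \delta {\delta'} \Delta$, using the corresponding generic equivalence Subsumption \emph{law} as the engine at each leaf where a type or term equivalence is consumed. The two judgments are defined pairwise from the four generic equivalence relations, so the inductive structure of the derivation exactly mirrors the inductive structure of the target judgments $\lpequiv i \Gamma \Delta$ and $\ltsubeq i {\delta}{\delta'}{\Delta}$, and the only nontrivial content is converting each generic equivalence premise into a syntactic equivalence premise.

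For the first statement, I would proceed by cases on the last rule forming $\ltctxgeq i \Gamma \Delta$. The two base cases ($\cdot$ and a context variable $g$) are immediate: the hypothesis already contains $\judge[L]\Psi$ (and $g : \Ctx \in \Psi$), which is exactly what the corresponding rule for $\lpequiv i \Gamma \Delta$ requires. For the cons case $\Gamma, x : T \at l$ versus $\Delta, x : T' \at{l'}$, the premises are $\ltctxgeq i \Gamma\Delta$, $\lttypgeq i T{T'} l$, and $\tyequiv[L]l{l'}\Level$. By IH the first gives $\lpequiv i \Gamma\Delta$; by the Subsumption law (second clause) the second gives $\lttypeq i T{T'} l$; the universe-level equivalence carries over verbatim. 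To feed the rule for $\lpequiv i {\Gamma, x : T \at l}{\Delta, x : T' \at{l'}}$ I additionally need $\lttypeq[\Psi][\Delta] i T{T'} l$, $\lttypwf i T l$, and $\lttypwf[\Psi][\Delta] i{T'}{l'}$; these follow from $\lttypeq i T{T'} l$ together with presupposition (the presupposition lemma for types gives well-formedness of both sides) and the local context equivalence lemma (\Cref{lem:dt:lctxeq1}, applied to transport well-formedness of $T$ along $\lpequiv{\typeof i}\Delta\Gamma$, obtained by lifting $\lpequiv i \Gamma \Delta$ and symmetry). So the cons case goes through by assembling these standard lemmas.

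For the second statement, I proceed by cases on the last rule forming $\ltsubgeq i \delta {\delta'} \Delta$. The three base cases ($\cdot^\metalevel$, $\cdot_g^\metalevel$, $\wk_g^\metalevel$) transfer directly to the matching base rules of $\ltsubeq i \delta {\delta'} \Delta$, since the side conditions ($\lpjudge i \Gamma$, $g : \Ctx \in \Psi$, the context shape, and the length condition) are identical. For the cons case $\delta, t/x$ versus $\delta', t'/x$, the premises are $\ltsubgeq i {\delta}{\delta'}{\Delta}$, $\lttypwf[\Psi][\Delta]i T l$, and $\lttrmgeq i {t}{t'}{T[\delta]} l$. The first yields $\ltsubeq i {\delta}{\delta'}{\Delta}$ by IH; the Subsumption law (fourth clause) turns the third into $\lttyequiv i t{t'}{T[\delta]} l$; and then presupposition on that equivalence supplies the two redundant well-typedness premises $\lttyping i t{T[\delta]} l$ and $\lttyping i {t'}{T[\delta]} l$ that the cons rule for $\ltsubeq$ demands (these extra premises are exactly the early-presupposition hedge noted just before \Cref{lem:dt:lsubst-eq}). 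One also needs $\lttypwf[\Psi][\Delta]{\typeof i}T l$ rather than $\lttypwf[\Psi][\Delta] i T l$; this is obtained by lifting (\Cref{lem:dt:lift}), since $i \le \typeof i$. Assembling these gives the cons case.

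The main obstacle I anticipate is bookkeeping at the cons cases rather than any deep difficulty: matching the precise (redundant) premise list of the $\lpequiv$ and $\ltsubeq$ cons rules requires invoking presupposition, lifting, and \Cref{lem:dt:lctxeq1} in the right combination, and one must be careful that these lemmas are available at this point in the development (they are, since the Subsumption lemma comes after the syntactic-properties section). If one preferred to avoid the extra lemma calls, an alternative is to observe that $\lpequiv$ and $\ltsubeq$ can equivalently be characterized so that a plain structural induction suffices — but as stated, routing through presupposition is the cleanest path and is entirely mechanical.
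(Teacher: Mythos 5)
Your proof is correct, and it fills in an argument the paper leaves entirely implicit (the lemma is stated with no proof). The mutual induction with subsumption at the leaves and presupposition, lifting, and local context equivalence to supply the redundant premises is exactly the intended reasoning; the only tiny bookkeeping nits are that for $i \in \{\proglevel, \metalevel\}$ one has $\typeof i = i$ so the lifting step is trivially the identity, you also need $\typing[L]l\Level$ for the $\ltsubeq$ cons rule (which again follows from presupposition), and in the transport step you in fact want $\lpequiv{\typeof i}{\Gamma}{\Delta}$ directly from the IH rather than its symmetric form, so symmetry is not actually needed there.
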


Due to subsumption, we know that components in generic equivalence are well-formed or
well-typed:
\begin{lemma}[Presupposition] $ $
  \begin{itemize}
  \item If $\lttypgeq i T{T'} l$, then $\lttypwf i T l$ and $\lttypwf i{T'} l$.
  \item If $\lttrmgeq i t{t'} T l$, then $\lttyping i t T l$ and $\lttyping i{t'}T
    l$. 
  \end{itemize}
\end{lemma}
\begin{proof}
  By subsumption and presupposition. 
\end{proof}

\begin{law}[PER]
  All four relations are PERs. 
\end{law}

\begin{law}[Type Conversion] $ $
  \begin{itemize}
  \item If $\lttrmgneeq i \nu {\nu'} T l$ and $\lttypeq i T{T'} l$, then $\lttrmgneeq i \nu {\nu'}{T'} l$.
  \item If $\lttrmgeq i t {t'} T l$ and $\lttypeq i T{T'} l$, then $\lttrmgeq i t
    {t'}{T'} l$.
  \end{itemize}
\end{law}

\begin{law}[Context Equivalence] $ $
  \begin{itemize}
  \item If $\lttypgneeq i V {V'} l$, $\gequiv\Phi\Psi$ and
    $\lpequiv[\Phi]{i}\Delta\Gamma$, then $\lttypgneeq[\Phi][\Delta] i V {V'} l$.
  \item If $\lttypgeq i T{T'} l$, $\gequiv\Phi\Psi$ and
    $\lpequiv[\Phi]{i}\Delta\Gamma$, then $\lttypgeq[\Phi][\Delta] i T{T'} l$.
  \item If $\lttrmgneeq i \nu {\nu'} T l$, $\gequiv\Phi\Psi$ and
    $\lpequiv[\Phi]{i}\Delta\Gamma$, then $\lttrmgneeq[\Phi][\Delta] i \nu {\nu'}{T} l$.
  \item If $\lttrmgeq i t {t'} T l$, $\gequiv\Phi\Psi$ and
    $\lpequiv[\Phi]{i}\Delta\Gamma$, then $\lttrmgeq[\Phi][\Delta] i t {t'}{T} l$.
  \end{itemize}
\end{law}

\begin{law}[Weakening] $ $
  \begin{itemize}
  \item If $\lttypgneeq i V {V'} l$ and
    $\psi :: L \sep \Psi; \Gamma \To_i L' \sep \Phi;\Delta$, then
    $\lttypgneeq[\Phi][\Delta][L'] i{V}{V'}{l}$.
  \item If $\lttypgeq i T{T'} l$ and
    $\psi :: L \sep \Psi; \Gamma \To_i L' \sep \Phi;\Delta$, then
    $\lttypgeq[\Phi][\Delta][L'] i{T}{T'}{l}$.
  \item If $\lttrmgneeq i \nu {\nu'} T l$ and
    $\psi :: L \sep \Psi; \Gamma \To_i L' \sep \Phi;\Delta$, then
    $\lttrmgneeq[\Phi][\Delta][L'] i{\nu}{\nu'}{T}{l}$.
  \item If $\lttrmgeq i t {t'} T l$ and
    $\psi :: L \sep \Psi; \Gamma \To_i L' \sep \Phi;\Delta$, then
    $\lttrmgeq[\Phi][\Delta][L'] i{t}{t'}{T}{l}$.
  \end{itemize}  
\end{law}

\begin{law}[Weak Head Closure] $ $
  \begin{itemize}
  \item If $\ttypreds i T W l$, $\ttypreds i{T'}{W'} l$ and $\lttypgeq i W{W'} l$, then $\lttypgeq i T{T'} l$. 
  \item If $\ttrmreds i t w T l$, $\ttrmreds i{t'}{w'}T l$ and $\lttrmgeq i w{w'} T l$, then $\lttrmgeq i t{t'} T l$.
  \end{itemize}
\end{law}

\begin{law}[Type Constructors]
  If $\lpjudge i \Gamma$, 
  \begin{itemize}
  \item if $\tyequiv[L] l{l'}\Level$, then $\lttypgeq i{\Ty l}{\Ty l'}{1 + l}$ and
    $\lttrmgeq i{\Ty l}{\Ty l'}{\Ty{1 + l}}{2 + l}$;
  \item then $\lttypgeq i{\Nat}{\Nat}0$ and $\lttrmgeq i{\Nat}{\Nat}{\Ty 0}{1}$;
  \item if $\lttypgeq i S{S'} l$ and
    $\lttypgeq[\Psi][\Gamma, x : S \at l] i T{T'}{l'}$, then \newline
    $\lttypgeq i{\PI{l}{l'}x S T}{\PI{l}{l'}x{S'}{T'}}{l \sqcup l'}$;
  \item if $\lttrmgeq i s{s'}{\Ty l}{1 + l}$ and $\lttrmgeq[\Psi][\Gamma, x : \Elt l s \at l] i
    t{t'}{\Ty{l'}}{1 + l'}$, then \newline
    $\lttrmgeq i{\PI{l}{l'}x s t}{\PI{l}{l'}x{s'}{t'}}{\Ty{l \sqcup
      l'}}{1 + {(l \sqcup l')}}$.
  \end{itemize}
\end{law}

\begin{law}[Neutral Types]  $ $
  \begin{itemize}
  \item If $\lpjudge i \Gamma$, $U : \DTyp[\Delta]{i'} l \in \Psi$, $i' \in \{\codelevel, \proglevel\}$,
    $i' \le i$ and
    $\ltsubgeq i \delta{\delta'} \Delta$, then
    $\lttypgneeq i{U^\delta}{U^{\delta'}}{l}$. 
  \item If $\tyequiv[L]{l}{l'}\Level$ and $\lttrmgneeq i \nu{\nu'}{\Ty l}{1 + l}$,
    then $\lttypgneeq i{\Elt l \nu}{\Elt{l'}{\nu'}}{l}$.
  \end{itemize}
\end{law}

\begin{law}[Congruence] $ $
  \begin{itemize}
  \item If $\lpjudge i \Gamma$, then $\lttrmgeq i \ze \ze \Nat 0$.
  \item If $\lttrmgeq i t{t'}\Nat 0$, then $\lttrmgeq i {\su t}{\su{t'}}\Nat 0$.
  \item If $\lttypwf i S l$, $\lttyping i t {\PI{l}{l'}x{S}{T}}{l \sqcup l'}$,
    $\lttyping i {t'}{\PI{l}{l'}x{S}{T}}{l \sqcup l'}$ and \newline
    $\lttrmgeq[\Psi][\Gamma, x : S \at{l}] i
    {\APP{t}{l}{l'}x{S}{T}{x}}{\APP{t'}{l}{l'}x{S}{T}{x}}{T}{l'}$, then \newline
    $\lttrmgeq i t{t'}{\PI{l}{l'}x{S}{T}}{l \sqcup l'}$.
  \end{itemize}
\end{law}

\begin{law}[Congruence for Neutrals] $ $
  \begin{itemize}
  \item If $\lpjudge i \Gamma$ and $x : T \at l \in \Gamma$, then $\lttrmgneeq i x x
    T l$. 
  \item If $\lpjudge i \Gamma$, $u : \DTrm[\Delta]{i'}T l \in \Psi$,
    $i' \in \{\varlevel, \codelevel\}$, $i' \le i$ and $\ltsubgeq i \delta{\delta'} \Delta$, then
    \newline $\lttrmgneeq i{u^\delta}{u^{\delta'}} {T[\delta]} l$.
  \item If $\tyequiv[L]l{l'}\Level$, $\lttypgeq[\Psi][\Gamma, x : \Nat \at 0] i M{M'} l$,
    $\lttrmgeq i {s_1}{s_3} {M[\ze/x]}l$, \newline
    $\lttrmgeq[\Psi][\Gamma, x : \Nat \at 0, y : M \at l] i{s_2}{s_4}{M[\su
      x/x]}l$ and
    $\lttrmgneeq i \nu{\nu'} \Nat 0$, then \newline
    $\lttrmgneeq i{\ELIMN l{x.M}{s_1}{x,y. s_2}\nu}{\ELIMN{l'}{x.M'}{s_3}{x,y. s_4}{\nu'}}{M[\nu/x]}{l}$.
  \item If $\tyequiv[L]{l_1}{l_3}\Level$, $\tyequiv[L]{l_2}{l_4}\Level$, $\lttypgeq i S{S'}{l_1}$,
    $\lttypgeq[\Psi][\Gamma, x : S \at{l_1}] i{T}{T'}{l_2}$, 
    $\lttrmgneeq i \nu{\nu'}{\PI {l_1}{l_2} x{S''}{T''}}{l_1 \sqcup l_2}$ and $\lttrmgeq i
    s{s'} S{l_1}$, then \newline
    $\lttrmgneeq i{\APP \nu {l_1}{l_2} x S T s}{\APP{\nu'}{l_3}{l_4} x
      {S'}{T'}{s'}}{T[s/x]}{l_2}$. 

  \end{itemize}
\end{law}

We derive that
\begin{lemma}[Reflexivity of Local Identity Substitutions]
  If $\lpequiv i \Gamma \Delta$, then
  $\ltsubgeq i{\id_{\Gamma}}{\id_{\Gamma}}{\Delta}$.
\end{lemma}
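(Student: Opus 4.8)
The plan is to unfold $\id_\Gamma := \wk^0_\Gamma$ and prove a mild generalization by induction. Since $\wk^k_\Gamma$ is defined by recursion on $\Gamma$, peeling ordinary bindings onto an accumulated suffix while incrementing $k$ until it reaches the base ($\cdot$ or a context variable), the induction must carry along that suffix. So I would first establish: for every context $\Gamma_1$ of ordinary bindings with $|\Gamma_1| = k$ and $\lpjudge i{\Gamma,\Gamma_1}$,
\[
  \text{if } \lpequiv i \Gamma \Delta \text{ then } \ltsubgeq[\Psi][\Gamma,\Gamma_1] i {\wk^k_{\Gamma}}{\wk^k_{\Gamma}}{\Delta}.
\]
The lemma as stated is then the case $\Gamma_1 = \cdot$, with $\lpjudge i\Gamma$ supplied by the presupposition lemma (\Cref{lem:dt:presup1}). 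I would prove the generalized statement by induction on the derivation of $\lpequiv i\Gamma\Delta$, which is syntax-directed by the common shape of $\Gamma$ and $\Delta$.

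The two base cases are immediate. When $\Gamma = \Delta = \cdot$ we have $\wk^k_\cdot = \cdot^k$, the ambient context $\Gamma_1$ ends with $\cdot$ and has length $k$, and $\lpjudge i{\Gamma_1}$ is the hypothesis, so the first rule defining $\ltsubgeq$ on local substitutions fires. When $\Gamma = \Delta = g$ we have $\wk^k_g = \wk^k_g$, the ambient context $g,\Gamma_1$ has base $g$ and length $k$, and $g : \Ctx \in \Psi$ follows from well-formedness, so the $\wk^k_g$ rule fires.

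The real work is the cons case $\Gamma = \Gamma', x : T \at l$ and $\Delta = \Delta', x : T' \at{l'}$, whose derivation supplies $\lpequiv i{\Gamma'}{\Delta'}$, $\lttypeq[\Psi][\Gamma']i T{T'}l$, $\lttypwf[\Psi][\Gamma']i T l$, $\lttypwf[\Psi][\Delta']i{T'}{l'}$, and $\tyequiv[L]l{l'}\Level$. Here $\wk^k_\Gamma = \wk^{1+k}_{\Gamma'}, x/x$, so I would apply the cons rule for $\ltsubgeq$ with codomain $\Delta', x : T'\at{l'}$. Its first premise, $\ltsubgeq[\Psi][\Gamma',x:T\at l,\Gamma_1] i{\wk^{1+k}_{\Gamma'}}{\wk^{1+k}_{\Gamma'}}{\Delta'}$, is exactly the induction hypothesis applied at the enlarged suffix $x:T\at l,\Gamma_1$ (length $1+k$), whose well-formedness is the current hypothesis $\lpjudge i{\Gamma,\Gamma_1}$. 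Its second premise, $\lttypwf[\Psi][\Delta']i{T'}{l'}$, is a premise of the $\lpequiv$ step. The remaining premise is $\lttrmgeq[\Psi][\Gamma',x:T\at l,\Gamma_1] i x x {T'[\wk^{1+k}_{\Gamma'}]}{l'}$, and this is the delicate point: the binding of $x$ in the ambient context carries type $T[\wk^{1+k}_{\Gamma'}]$ at level $l$, not $T'[\wk^{1+k}_{\Gamma'}]$ at level $l'$. I would resolve this by first assembling $\lpequiv i{\Gamma',x:T\at l,\Gamma_1}{\Gamma',x:T'\at{l'},\Gamma_1}$ from reflexivity of $\lpequiv$ on $\Gamma'$ and on $\Gamma_1$ together with the $T \approx T'$, $l \approx l'$ premises (transporting the well-formedness of $T'$ into $\Gamma'$ via \Cref{lem:dt:lift} and \Cref{lem:dt:lctxeq1}); then, in the context $\Gamma',x:T'\at{l'},\Gamma_1$ — where $x$ genuinely has type $T'[\wk^{1+k}_{\Gamma'}]$ at level $l'$ — applying the variable clause of the \emph{Congruence for Neutrals} law to get $\lttrmgneeq i x x {T'[\wk^{1+k}_{\Gamma'}]}{l'}$; then transporting this back along the \emph{Context Equivalence} law (using reflexivity of $\gequiv$, which follows from $\judge[L]\Psi$); and finally invoking \emph{Subsumption}.

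I expect the main obstacle to be precisely this variable premise: lining up the de Bruijn / weakening bookkeeping so that the type assigned to $x$ in $\Gamma',x:T'\at{l'},\Gamma_1$ is literally $T'[\wk^{1+k}_{\Gamma'}]$ at level $l'$, and carefully building the local-context equivalence needed to apply \emph{Context Equivalence} (going through this equivalent context is what lets us change the level, which a bare type-conversion would not). Everything else is a routine traversal of the base and cons rules of $\ltsubgeq$ together with presupposition and the reflexivity lemmas for local and global context equivalence.
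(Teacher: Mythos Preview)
Your proposal is correct and elaborates the derivation the paper only hints at (it states the lemma immediately after the Congruence for Neutrals law with the phrase ``We derive that'' and gives no further proof). Your generalization to $\wk^k_\Gamma$, the induction on $\lpequiv i \Gamma \Delta$, and in particular your handling of the level mismatch in the cons case---passing through the intermediate context where $x$ natively carries $T'$ at $l'$, applying the variable clause of Congruence for Neutrals there, and transporting back via the Context Equivalence law---are exactly right; the one superfluous ingredient is the appeal to \Cref{lem:dt:lift}, since the generic equivalence is only defined for $i \in \{\proglevel,\metalevel\}$, where $\typeof i = i$ already.
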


\begin{lemma}[Congruence of Global Variables] $ $
  \begin{itemize}
  \item If $\lpequiv i \Gamma \Delta$, $u : \DTrm[\Delta]{i'}T l \in \Psi$, $i' \in \{\varlevel, \codelevel\}$ and
    $i' \le i$, then $\lttrmgneeq i {u^{\id_{\Gamma}}}{u^{\id_{\Gamma}}} T l$.
  \item If $\lpequiv i \Gamma \Delta$, $U : \DTyp[\Delta]{i'} l \in \Psi$,
    $i' \in \{\codelevel, \proglevel\}$ and $i' \le i$, then $\lttypgneeq i{U^{\id_{\Gamma}}}{U^{\id_{\Gamma}}}{l}$.
  \end{itemize}
\end{lemma}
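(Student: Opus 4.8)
The plan is to derive both clauses as one-line instantiations of the generic-equivalence laws, feeding in the identity local substitution. First I would extract from $\lpequiv i \Gamma \Delta$ the two facts we need: by presupposition (\Cref{lem:dt:presup1}) we get $\lpjudge i \Gamma$, and by the \emph{Reflexivity of Local Identity Substitutions} lemma proved just above we get $\ltsubgeq i {\id_\Gamma}{\id_\Gamma}{\Delta}$ (the legitimacy of $\id_\Gamma$ as a local substitution with codomain $\Delta$ is exactly \Cref{lem:dt:lctxeq-id}, using $\Gamma \approx \Delta$). If the ambient layer needs to be raised from $i$ to $\typeof i$ anywhere, this is supplied by \Cref{lem:dt:lift} since $i \le \typeof i$ always holds.

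For the term clause, I would then invoke the global-variable case of the \emph{Congruence for Neutrals} law with $u : \DTrm[\Delta]{i'}T l \in \Psi$, $i' \in \{\varlevel,\codelevel\}$, $i' \le i$, $\lpjudge i \Gamma$, and $\ltsubgeq i {\id_\Gamma}{\id_\Gamma}{\Delta}$; this yields $\lttrmgneeq i {u^{\id_\Gamma}}{u^{\id_\Gamma}}{T[\id_\Gamma]}l$. It then remains to identify $T[\id_\Gamma] = T$: this is the $[\id]$ law from the algebra of local substitutions, the point being that the subscript on $\id$ does not affect its action (every entry of $\id_\Gamma$ is of the form $x/x$), so it acts as the identity even though $T$ is formed over $\Delta$ rather than $\Gamma$. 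Rewriting the type gives $\lttrmgneeq i {u^{\id_\Gamma}}{u^{\id_\Gamma}}{T}l$, as required. For the type clause the argument is even shorter: the global-type-variable case of the \emph{Neutral Types} law, instantiated with $\delta = \delta' = \id_\Gamma$, directly produces $\lttypgneeq i {U^{\id_\Gamma}}{U^{\id_\Gamma}}{l}$, and here the conclusion carries no substituted type, so no bookkeeping is needed at all.

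The only mildly delicate point — and hence the ``main obstacle'', such as it is — is the equality $T[\id_\Gamma] = T$ when $\Gamma$ and $\Delta$ are merely equivalent rather than literally equal; one must be comfortable treating $\id_\Gamma$ as the identity substitution $\Gamma \to \Delta$. A robust alternative, should one wish to avoid relying on the syntactic equality, is to obtain $\lttypeq{\typeof i}{T[\id_\Gamma]}{T}l$ from the local-substitution and local-context-conversion lemmas and then finish by the \emph{Type Conversion} law for generic neutral equivalence. Either way the lemma is a routine corollary of the preceding reflexivity lemma and the neutral-congruence laws, so I would expect the write-up to be only a few lines.
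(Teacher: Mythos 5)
Your proof is correct and takes precisely the route the paper intends: extract $\lpjudge i \Gamma$ by presupposition, obtain $\ltsubgeq i{\id_{\Gamma}}{\id_{\Gamma}}{\Delta}$ from the immediately preceding reflexivity lemma, and plug these into the global-variable case of the Congruence for Neutrals law (resp.\ the Neutral Types law), rewriting $T[\id_\Gamma]=T$ in the term case. The one small caveat is that the lifting step you hedge about is not actually needed — the generic-equivalence laws are stated for $\lpjudge i\Gamma$ at the same $i$, which presupposition already supplies — and the identity $T[\id_\Gamma]=T$ is unproblematic here because equivalent local contexts have identical shape, so $\id_\Gamma$ and $\id_\Delta$ are the same syntactic substitution; your fallback via Type Conversion is a fine belt-and-braces alternative.
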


At this point, we give the laws that should hold for both layers. %
Next, we consider laws that only hold at layer $i = \metalevel$. %
We first give the laws for type constructors.
\begin{law}[Type Constructors] $ $
  \begin{itemize}
  \item If $\lpjudge \metalevel \Gamma$, $\lttypgeq[\Psi][\Gamma][L, \vect\ell] \metalevel{T}{T'}{l}$
    and $\tyequiv[L,\vect\ell]{l}{l'}{\Level}$, then
    $\lttypgeq \metalevel{\UPI \ell l T}{\UPI \ell{l'}{T'}}\omega$.
  \item If $\lpjudge \metalevel \Gamma$, $\lttypgeq[\Psi, g : \Ctx] \metalevel{T}{T'}{l}$ and
    $\tyequiv[L]{l}{l'}{\Level}$, then \newline
    $\lttypgeq \metalevel{\CPI g l T}{\CPI g{l'}{T'}}{l}$. 
  \item If $\lpjudge \metalevel \Gamma$, $\lttypgeq[\Psi, U : \DTyp[\Delta] \proglevel{l_1}]
    \metalevel{T}{T'}{l_2}$, $\ltctxgeq \proglevel{\Delta}{\Delta'}$, $\tyequiv[L]{l_1}{l_3}{\Level}$
    and $\tyequiv[L]{l_2}{l_4}{\Level}$, then 
    $\lttypgeq \metalevel{\TPI U [\Delta]{l_1}{l_2} T}{\TPI U[\Delta']{l_3}{l_4}{T'}}{l_2}$.
  \item If $\lpjudge \metalevel \Gamma$ and $\ltctxgeq \proglevel{\Delta}{\Delta'}$, then $\lttypgeq
    \metalevel{\CTyp[\Delta]l}{\CTyp[\Delta']l}{0}$.
  \item If $\lpjudge \metalevel \Gamma$, $\ltctxgeq \proglevel{\Delta}{\Delta'}$ and
    $\lttypgeq[\Psi][\Delta]\proglevel T{T'}l$, then \newline
    $\lttypgeq \metalevel{\CTrm[\Delta] Tl}{\CTrm[\Delta']{T'}l}{0}$.
  \end{itemize}
\end{law}

\begin{law}[Congruence] $ $
  \begin{itemize}
  \item If $\lttyping \metalevel t {\UPI \ell l T}{\omega}$, $\lttyping \metalevel {t'}{\UPI \ell l
      T}{\omega}$ and $\lttrmgeq[\Psi][\Gamma][L, \vect\ell] \metalevel{\UAPP
      t{\ell}}{\UAPP{t'}{\ell}}{T}{l}$, then
    $\tconvtrmnf \metalevel{t}{t'}{\UPI \ell l T}{\omega}$. 
  \item If $\lttyping \metalevel t{\CPI g l T}{l}$, $\lttyping \metalevel {t'}{\CPI g l T}{l}$ and \newline
    $\lttrmgeq[\Psi, g : \Ctx] \metalevel{\CAPP t{g}}{\CAPP{t'}{g}}{T}{l}$, then $\lttrmgeq
    \metalevel{t}{t'}{\CPI g l T}{l}$. 
  \item If $\lttyping \metalevel t {\TPI U [\Delta]{l}{l'} T}{l'}$,
    $\lttyping \metalevel {t'}{\TPI U [\Delta]{l}{l'} T}{l'}$, and \newline
    $\lttrmgeq[\Psi, U : \DTyp[\Delta] \proglevel{l}] \metalevel{\TAPP
      t{U^{\id_\Delta}}}{\TAPP{t'}{U^{\id_\Delta}}}{T}{l'}$, then \newline
    $\lttrmgeq \metalevel{w}{w'}{\TPI U [\Delta]{l}{l'} T}{l'}$.
  \item If $\lpjudge \metalevel \Gamma$ and $\lttypwf[\Psi][\Delta] \codelevel T l$, then $\lttrmgeq
    \metalevel{\boxit T}{\boxit{T}}{\CTyp[\Delta] l}{0}$.
  \item If $\lpjudge \metalevel \Gamma$ and $\lttyping[\Psi][\Delta] \codelevel t T l$, then
    $\lttrmgeq \metalevel{\boxit t}{\boxit{t}}{\CTrm[\Delta] T l}{0}$. 
  \end{itemize}
\end{law}

\begin{law}[Congruence for Neutrals] $ $
  \begin{itemize}
  \item If $\lttrmgneeq \metalevel{\nu}{\nu'}{\UPI \ell l T}{\omega}$, $|\vect\ell| = |\vect l| = |\vect
    l'| > \codelevel$ and
    $\forall ~ \codelevel \le n < |\vect l| ~.~ \tyequiv[L]{\vect l(n)}{\vect l'(n)}\Level$,
    then $\lttrmgneeq \metalevel{\UAPP \nu l}{\nu'~\$~\vect l'}{T[\vect l/\vect \ell]}{l[\vect
      l/\vect \ell]}$. 
  \item If $\lttrmgneeq \metalevel{\nu}{\nu'}{\CPI g l T}{l}$ and $\ltctxgeq \proglevel
    \Delta{\Delta'}$, then $\lttrmgneeq \metalevel{\CAPP \nu
      \Delta}{\CAPP{\nu'}{\Delta'}}{T[\Delta/g]}{l}$.
  \item If $\lttrmgneeq \metalevel{\nu}{\nu'}{\TPI U[\Delta] l{l'}{T''}}{l'}$ and
    $\lttypgeq[\Psi][\Delta] \proglevel{T}{T'}l$, then \newline
    $\lttrmgneeq \metalevel{\TAPP \nu{T}}{\TAPP{\nu'}{T'}}{T''[T/U]}{l'}$. 
  \item If $\lpjudge \metalevel \Gamma$, $\tyequiv[L]{l_1}{l_3}\Level$, 
    $\tyequiv[L]{l_2}{l_4}\Level$, $\ltctxgeq \proglevel \Delta{\Delta'}$, \newline
    $\lttrmgneeq \metalevel{\nu}{\nu'}{\CTyp[\Delta]{l_2}}{0}$, 
    $\lttypgeq[\Psi][\Gamma,x_T : \CTyp[\Delta]{l_2} \at{0}]\metalevel{M}{M'}{l_1}$
    and \newline
    $ \lttrmgeq[\Psi, U : \DTyp[\Delta]\codelevel{l_2}]\metalevel{t_1}{t_2}{M[\boxit{U^\id}/x_T]}{l_1}$, then \newline
    $\lttrmgneeq \metalevel{\LETBTYP{l_1}{l_2}
      \Delta{x_T.M}{U}{t_1}\nu}{\LETBTYP{l_3}{l_4}{\Delta'}{x_T.M'}{U}{t_2}{\nu'}}{M[t/x_T]}{l_1}$. 
  \item If $\lpjudge \metalevel \Gamma$, $\tyequiv[L]{l_1}{l_3}\Level$, 
    $\tyequiv[L]{l_2}{l_4}\Level$, $\ltctxgeq \proglevel \Delta{\Delta'}$,
    $\lttypgeq \proglevel T{T'}{l_2}$,
    $\lttrmgneeq \metalevel{\nu}{\nu'}{\CTrm[\Delta]T{l_2}}{0}$, 
    $\lttypgeq[\Psi][\Gamma,x_T : \CTrm[\Delta]T{l_2} \at{0}]\metalevel{M}{M'}{l_1}$
    and \newline
    $ \lttrmgeq[\Psi, u : \DTrm[\Delta]\codelevel T{l_2}]\metalevel{t_1}{t_2}{M[\boxit{u^\id}/x_t]}{l_1}$, then \newline
    $\lttrmgneeq \metalevel{\LETBTRM{l_1}{l_2}
      \Delta{T}{x_t.M}{u}{t_1}\nu}{\LETBTRM{l_3}{l_4}{\Delta'}{T'}{x_T.M'}{u}{t_2}{\nu'}}{M[t/x_t]}{l_1}$. 
  \end{itemize}
\end{law}

Then we consider the law for neutral forms for recursive principles for code. %
The law follows a similar line to the equivalence judgments and the convertibility
checking: the evaluation is blocked when the scrutinee is neutral or is a $\tbox$'ed
global variable. 
\begin{law}[Neutral Recursion on Code] $ $
  \begin{itemize}
  \item If all motives and branches are related by corresponding generic equivalence,
    moreover, $\tyequiv[L]{l}{l'}\Level$, $\ltctxgeq \proglevel{\Delta}{\Delta'}$ and
    $\lttrmgneeq \metalevel{\nu}{\nu'}{\CTyp[\Delta]{l}}{0}$, then \newline
    $\lttrmgneeq \metalevel{\ELIMTYP{l_1}{l_2}Mbl\Delta
      \nu}{\ELIMTYP{l_3}{l_4}{M'}{b'}{l'}{\Delta'}{\nu'}}{M_\Typ[l/\ell,\Delta/g,\nu/x_T]}{l_1}$. 
  \item If all motives and branches are related by corresponding generic equivalence,
    moreover, $\tyequiv[L]{l}{l'}\Level$, $\ltctxgeq \proglevel{\Delta}{\Delta'}$,
    $U : \DTyp[\Delta]{\codelevel} l \in \Psi$ and $\ltsubst \codelevel \delta \Delta$, then \newline
    $\lttrmgneeq \metalevel{\ELIMTYP{l_1}{l_2}Mbl\Delta{(\boxit{U^\delta})}}{\ELIMTYP{l_3}{l_4}{M'}{b'}{l'}{\Delta'}{(\boxit{U^\delta})}}{M_\Typ[l/\ell,\Delta/g,\boxit{U^\delta}/x_T]}{l_1}$.
  \item If all motives and branches are related by corresponding generic equivalence,
    moreover, $\tyequiv[L]{l}{l'}\Level$, $\ltctxgeq \proglevel{\Delta}{\Delta'}$,
    $\tconvtyp \proglevel T{T'}{l}$ and
    $\tconvtrmnee \metalevel{\nu}{\nu'}{\CTrm[\Delta]T{l}}{0}$, then \newline
    $\lttrmgneeq \metalevel{\ELIMTRM{l_1}{l_2}Mbl\Delta T \nu}{\ELIMTYP{l_3}{l_4}{M'}{b'}{l'}{\Delta'}{T'}{\nu'}}{M_\Trm[l/\ell,\Delta/g,T/U_T,\nu/x_t]}{l_2}$.
  \item If all motives and branches are related by corresponding generic equivalence,
    moreover, $\tyequiv[L]{l}{l'}\Level$, $\ltctxgeq \proglevel{\Delta}{\Delta'}$,
    $\tconvtyp \proglevel T{T'}{l}$, $u : \DTrm[\Delta]{i'}T l \in \Psi$,
    $ i' \in \{\varlevel, \codelevel\}$ and $\ltsubst \codelevel \delta \Delta$, then \newline
    $\lttrmgneeq \metalevel{\ELIMTRM{l_1}{l_2}Mbl\Delta T
      {(\boxit{u^\delta})}}{\ELIMTYP{l_3}{l_4}{M'}{b'}{l'}{\Delta'}{T'}{(\boxit{u^\delta})}}{M_\Trm[l/\ell,\Delta/g,T/U_T,\boxit{u^\delta}/x_t]}{l_2}$.
  \end{itemize}
\end{law}
We conclude all the laws here fore the generic equivalence.

This generic equivalence will be instantiated twice times: the syntactic equivalence
judgments, and the convertibility checking judgments. %
\delamlang is way more complex than \citet{abel_decidability_2017}'s work because
\delamlang has computations at two layers, $\proglevel$ and $\metalevel$. %
Therefore, we must derive the necessary properties from the fundamental theorems to be
proved shortly at each layer. %

\subsection{Kripke Logical Relations for MLTT}\labeledit{sec:dt:klogrel}

The Kripke logical relations are parameterized by the generic equivalence. %
It is additionally indexed by another layering index $j \in \{\proglevel, \metalevel\}$, which
quantifies the types described by the relations. %
When $j = \proglevel$, we consider types from MLTT. %
When $j = \metalevel$, we consider all possible types. %
The reason for this distinction is to handle the lifting property from layer $\codelevel$
(which has the terms as $\proglevel$) or $\proglevel$ to
$\metalevel$, where terms from MLTT are brought to \delamlang. %
On the semantic side, we need to make sure that terms from MLTT can interact with
``native'' terms in \delamlang coherent. %
We further restrict $j = \proglevel$ when $i = \proglevel$. 

The Kripke logical relations are defined by
\begin{enumerate}
\item recursion on $j$, which effectively means the logical relations are
  2-layered; also note when $i = \proglevel$, $j = \proglevel$ is determined;
\item a transfinite well-founded recursion on the universe levels, and
\item induction-recursion on related types and terms. 
\end{enumerate}
In particular, the recursion on $j$ is necessary, as the relations when $j = \metalevel$ depend
on the validity judgments of $j = \proglevel$. %
When we do a recursion on universe levels, we must mind the well-foundedness of
universe levels. %
As we have discussed in \Cref{sec:dt:ulevel}, we are sure that all universe
levels must find a finite number of steps to descend to $0$. %
The only problem is $\omega$, which is not finite. %
Thus we must include one large cardinal to handle this level, hence the transfinite
recursion. %
Luckily, we do not have to think about it most of the time as we cannot really use
$\omega$ to do anything special at all. %
Note that our relations do not exactly follow \citet{abel_decidability_2017} tightly,
where two relations are defined for types and terms respectively. %
In our case, we provide simpler inductive-recursive definitions, where only one
relation is defined for types and for terms respectively. %
This style is more akin to the PER models in untyped domains by
\citet{abel_normalization_2013,hu_jang_pientka_2023}, except that our logical
relations are Kripke. %
We follow a proof schema that combines that of \citet{abel_decidability_2017} and that
of \citet{abel_normalization_2013,hu_jang_pientka_2023}. %
We define the following judgments:
\begin{itemize}
\item $\D :: \ldtypeq i j T{T'} l$ denotes that two types $T$ and $T'$ are related. %
  This relation is defined inductively. %
  We use $\D$ to mark give a name to the derivation as we will do recursion on it. %
\item $\ldtyequiv i j t{t'} \D$ denotes that two terms $t$ and $T'$ related by $\D$. %
  This relation is defined by a recursion on $\D$. 
\item $\E :: \ldctxeq i j \Gamma \Delta$ denotes that two contexts are related. %
  It is a generalization of $\D$. 
\item $\ldsubeq i j \delta{\delta'} \E$ denotes that two local substitutions
  $\delta$ and $\delta'$ are related. %
  It is a generalization of $\ldtyequiv i j t{t'} \D$ by doing recursion on $\E$. 
\end{itemize}
For convenience, we define the following:
\begin{align*}
  \ldtypwf i j T l &:= \ldtypeq i j T T l \\
  \ldtyequivt i j t{t'} T l & := \text{for some $T'$, } \D :: \ldtypeq i j T{T'}
l \tand \ldtyequiv i j t{t'}{\D} \\
  \ldtyping i j t T l & := \ldtyequivt i j t{t} T l \\
  \ldctxwf i j \Gamma &:=  \ldctxeq i j \Gamma \Gamma \\
  \ldsubeq i j \delta{\delta'} \Delta &:= \text{for some $\Delta'$, }\E :: \ldctxeq i
                                        j \Delta{\Delta'} \tand \ldsubeq i j \delta{\delta'} \E \\
  \ldsubst i j \delta \Delta &:= \ldsubeq i j \delta{\delta} \Delta
\end{align*}

Now we proceed to define the relations. %
We begin with the natural numbers. %
\begin{mathpar}
  \D :: \inferrule
  {\ttypreds i T{\Nat}0 \\ \ttypreds i {T'}\Nat0}
  {\ldtypeq i j T{T'}0}
\end{mathpar}
Then $\ldtyequiv i j t{t'} \D$ is defined by $\ldtyeqnat i t{t'}$, which we
define as follows:
\begin{mathpar}
  \inferrule
  {\ttrmreds i t w \Nat 0 \\ \ttrmreds i{t'}{w'}\Nat 0 \\ \lttrmgeq i w{w'}\Nat 0
    \\ \ldtynfeqnat i w{w'}}
  {\ldtyeqnat i t{t'}}

  \inferrule
  { }
  {\ldtynfeqnat i \ze \ze}

  \inferrule
  {\ldtyeqnat i t{t'}}
  {\ldtynfeqnat i{\su t}{\su{t'}}}

  \inferrule
  {\lttrmgneeq i \nu{\nu'}\Nat 0}
  {\ldtynfeqnat i{\nu}{\nu'}}  
\end{mathpar}

Then we consider universes.
\begin{mathpar}
  \D :: \inferrule
  {\ttypreds i T{\Ty{l_1}}{1 + l_1} \\ \ttypreds i {T'}{\Ty{l_2}}{1 + l_2} \\\\
  \tyequiv[L]{l_1}{l}\Level \\ \tyequiv[L]{l_2}{l}\Level}
  {\ldtypeq i j{T}{T'}{1 + l}}
\end{mathpar}
Then $\ldtyequiv i j t{t'} \D$ is defined by
\begin{itemize}
\item $\ttrmreds i t w{\Ty l}{1 + l}$,
\item $\ttrmreds i {t'}{w'}{\Ty l}{1 + l}$,
\item $\lttrmgeq i w{w'}{\Ty l}{1 + l}$, which means that $t$ and $t'$ are equivalent
  types at level $l$,
\item $\ldtypeq i j{\Elt l w}{\Elt l{w'}}{l}$, which means that the corresponding
  types of $w$ and $w'$ are related. 
\end{itemize}
The last condition requires the well-founded recursion on the universe levels in order
to refer back to the relation for types. %
Notice that the universe level decreases by one so this definition is valid. %

Then we define the relation for $\Pi$ types.
\begin{mathpar}
  \D :: \inferrule
  {\ttypreds i T{\PI l{l'} x{S_1}{T_1}}{l \sqcup l'} \\ \ttypreds i {T'}{\PI l{l'}
      x{S_2}{T_2}}{l \sqcup l'} \\
    \lttypwf i{S_1} l \\ \lttypwf i{S_2} l \\ \lttypwf[\Psi][\Gamma, x : S_1 \at l]
    i{T_1}{l'} \\
    \lttypwf[\Psi][\Gamma, x : S_2 \at l] i{T_2}{l'} \\
    \lttypgeq i{\PI l{l'} x{S_1}{T_1}}{\PI l{l'} x{S_2}{T_2}}{l \sqcup l'} \\
    \E :: (\forall~ \psi :: L' \sep \Phi;\Delta \To_i L \sep \Psi; \Gamma ~.~
    \ldtypeq[\Phi][\Delta][L'] i j{S_1}{S_2}{l}) \\
    \F :: (\forall~ \psi :: L' \sep \Phi;\Delta \To_i L \sep \Psi; \Gamma \tand \ldtyequiv[\Phi][\Delta][L']
    i j {s}{s'}{\E(\psi)} ~.~
    \ldtypeq[\Phi][\Delta][L'] i j{T_1[s/x]}{T_2[s'/x]}{l'})}
  {\ldtypeq i j{T}{T'}{l \sqcup l'}}
\end{mathpar}
This case is particularly complex. %
Let us digest the premises one by one. %
First, we require that both types $T$ and $T'$ to reduce to some $\Pi$ types. %
The typing judgments require the components of the $\Pi$ types are well-formed. %
Further, the $\Pi$ types themselves are equivalent. %
Then $\E$ requires that $S_1$ and $S_2$ are related under any weakening. %
This makes the relation Kripke. %
$\F$ is similar but require $T_1$ and $T_2$ remain related given any two related terms
$s$ and $s'$. %
Furthermore, in reality, we should put down sufficient premises for equivalent
universes to allow syntactically different universes to appear in both $T$ and $T'$,
but due to the size of this rule, we only use $l$ and $l'$ here. %
We apply the same principle in other rules in the rest of this technical report.

Then $\ldtyequiv i j t{t'} \D$ is defined by
\begin{itemize}
\item $\ttrmreds i t w{\PI l{l'} x{S_1}{T_1}}{l \sqcup l'}$, 
\item $\ttrmreds i {t'}{w'}{\PI l{l'} x{S_2}{T_2}}{l \sqcup l'}$,
\item $\lttrmgeq i w{w'}{\PI l{l'} x{S_1}{T_1}}{l \sqcup l'}$,
\item $\lttrmgeq i w{w'}{\PI l{l'} x{S_2}{T_2}}{l \sqcup l'}$, which is duplicated to
  make symmetry a simpler property,
\item
  for all $\forall\ \psi :: L' \sep \Phi;\Delta \To_i L \sep \Psi; \Gamma$ and $\Ac ::
  \ldtyequiv[\Phi][\Delta][L'] i j {s}{s'}{\E(\psi)}$, moreover, we add
  equivalence assumptions to remove the effect of type annotations,
  $\lttypgeq[\Phi][\Delta][L'] i{S_1}{S_1'}l$, $\lttypgeq[\Phi][\Delta][L'] i{S_2}{S_2'}l$, \break
  $\lttypgeq[\Phi][\Delta, x : S_1 \at l][L'] i{T_1}{T_1'}{l'}$,
  $\lttypgeq[\Phi][\Delta, x : S_2 \at l][L'] i{T_2}{T_2'}{l'}$, then \break
  $\ldtyequiv[\Phi][\Delta][L'] i j{\APP w l{l'} x{S_1'}{T_1'} s}{\APP{w'}l{l'}
    x{S_2'}{T_2'}{s'}}{\F(\psi, \Ac)}$.
\end{itemize}

The next case is neutral types.
\begin{mathpar}
  \D :: \inferrule
  {\ttypreds i T{V}l \\ \ttypreds i{T'}{V'}l \\
    \lttypgneeq i V{V'} l}
  {\ldtypeq i j{T}{T'}{l}}
\end{mathpar}
Then $\ldtyequiv i j t{t'} \D$ is defined by
\begin{itemize}
\item $\ttrmreds i t \nu{V}{l}$,
\item $\ttrmreds i {t'}{\nu'}{V'}{l}$,
\item $\lttrmgneeq i \nu{\nu'}{V}{l}$,
\item $\lttrmgneeq i \nu{\nu'}{V'}{l}$.
\end{itemize}
Again, we duplicate $\nu \sim \nu'$ to make symmetry easy.

The last case possible for both layers is equivalence of universe levels. %
This case is introduced for bureaucratic purposes and often ignored.
\begin{mathpar}
  \D :: \inferrule
  {\E :: \ldtypeq i j{T}{T'}{l'} \\ \tyequiv[L]l{l'}\Level}
  {\ldtypeq i j{T}{T'}{l}}
\end{mathpar}
Then $\ldtyequiv i j t{t'} \D$ is defined by $\ldtyequiv i j t{t'} \E$. 

Now we have given all logical relations for types and terms that are available at both layers. %
In order to give the logical relations for layer $\metalevel$, we need to first give the
logical relations for local contexts and local substitutions, as the semantics of
contextual types depend on them. %
The complexity of our logical relations primarily comes from the fact that we must be
able to handle computation at two layers ($\proglevel$ and $\metalevel$) and different lifting
behaviors. %

We proceed by defining related local contexts inductively and then the corresponding
recursive case for local substitutions. %
\begin{mathpar}
  \E :: \inferrule
  {\judge[L] \Psi}
  {\ldctxeq i j \cdot\cdot}
\end{mathpar}
Then $\ldsubeq i j \delta{\delta'} \E$ is defined by $\lpjudge i \Gamma$ and then checking $\Gamma$:
\begin{itemize}
\item if $\Gamma$ ends with $\cdot$, then $\delta = \delta' = \cdot^{|\Gamma|}$;
\item if $\Gamma$ ends with $g$, then $g : \Ctx \in \Psi$ and $\delta = \delta' = \cdot^{|\Gamma|}_g$;
\end{itemize}

\begin{mathpar}
  \E :: \inferrule
  {\judge[L] \Psi \\ g : \Ctx \in \Psi}
  {\ldctxeq i j {g}{g}}
\end{mathpar}
Then $\ldsubeq i j \delta{\delta'} \E$ is defined as $\lpjudge i \Gamma$ and
\begin{itemize}
\item $\Gamma$ also ends with $g$,
\item $\delta = \delta' = \wk^{|\Gamma|}_g$.
\end{itemize}

\begin{mathpar}
  \E :: \inferrule
  {\F :: \forall~\alpha :: L' \sep \Phi \To L \sep \Psi ~.~ \ldctxeq[\Phi][L'] i j
    \Delta{\Delta'} \\ \D :: \ldStypeq \F i j T{T'} l \\ \tyequiv[L]l{l'}\Level}
  {\ldctxeq i j{\Delta, x : T \at l}{\Delta', x : T' \at{l'}}}
\end{mathpar}
Then $\ldsubeq i j \delta{\delta'} \E$ is defined as
\begin{itemize}
\item $\delta = \delta_1, t/x$,
\item $\delta' = \delta_1', t'/x$,
\item $\Cc :: \forall~\alpha :: L' \sep \Phi \To L \sep \Psi ~.~ \ldsubeq[\Phi][\Gamma][L'] i j {\delta_1}{\delta_1'}{\F(\alpha)}$,
\item $\forall~\alpha :: L' \sep \Phi \To L \sep \Psi ~.~ \ldtyequiv[\Phi][\Gamma][L'] i j t{t'}{\D(\alpha,
    \Cc(\alpha))}$.
\end{itemize}
where we let $\D :: \ldStypeq \F i j T{T'} l$ to be
\[
  \forall~\alpha :: L' \sep \Phi \To L \sep \Psi \tand \ldsubeq[\Phi][\Gamma][L'] i j \delta{\delta'}{\F(\alpha)} ~.~ \ldtypeq[\Phi][\Gamma][L'] i j {T[\delta]}{T'[\delta']}{l}
\]
The judgment $\ldStypeq \F i j T{T'} l$ requires the stability under local substitutions
of the relation between $T$ and $T'$. %
We apply the same principle to related terms and derive the judgment
$\ldStyequiv \F i j t{t'}{\D}$, which is given by
\[
  \forall~\alpha :: L' \sep \Phi \To L \sep \Psi \tand \Cc :: \ldsubeq[\Phi][\Gamma][L'] i j
  \delta{\delta'}{\F(\alpha)} ~.~ \ldtyequiv[\Phi][\Gamma][L'] i j {t[\delta]}{t'[\delta']}{\D(\alpha,
    \Cc)}
\]

\subsection{Properties for Logical Relations When $j = \proglevel$}\labeledit{sec:dt:logrel-prop}

In this section, we pause our progress of defining the logical relations when $j =
\metalevel$. %
Following previous lines of work, when we give the Kripke logical relations to
contextual types, we will have to refer to the validity judgments of types and terms
in the logical relations. %
Therefore, in this section, we first work out a list of properties of the logical
relations when $j = \proglevel$, and then in the next section, we define semantic judgments for
global contexts and global substitutions, and then validity judgments. %
Once we have the validity judgments when $j = \proglevel$, we can then finish writing down the
logical relations for contextual types. %
Without further ado, let us start proving some lemmas. %
The list of lemmas mainly follows \citet{abel_normalization_2013,hu_jang_pientka_2023}
though we also take \citet{abel_decidability_2017} into consideration. 

\begin{lemma}[Weakening] $ $
  \begin{itemize}
  \item If $\D :: \ldtypeq i \proglevel{T}{T'}{l}$ and
    $\psi :: L' \sep \Phi;\Delta \To_i L \sep \Psi; \Gamma$, then
    $\E :: \ldtypeq[\Phi][\Delta][L'] i \proglevel{T}{T'}{l}$.
  \item If $\ldtyequiv i \proglevel{t}{t'}\D$ and
    $\psi :: L' \sep \Phi;\Delta \To_i L \sep \Psi; \Gamma$, then
    $\ldtyequiv[\Phi][\Delta][L'] i \proglevel{t'}{t}\E$.
  \end{itemize}
\end{lemma}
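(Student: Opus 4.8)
The plan is to prove both clauses at once by a recursion whose shape mirrors the definition of the logical relation itself: an outer transfinite well-founded recursion on the universe level $l$ and, at each fixed $l$, a structural induction on the derivation $\D :: \ldtypeq i \proglevel T{T'}l$. Because $\ldtyequiv i \proglevel t{t'}\D$ is \emph{defined} by recursion on $\D$, the term clause is not a separate induction: in each case we simply re-expand its definition against the weakened type-derivation $\E$ that the type clause constructs (the swap of $t$ and $t'$ in the stated conclusion being immaterial by the \textbf{PER} law). The only facts needed from elsewhere are stability of the reductions $\redd$ and $\reds$ under weakenings (special cases of the Local, Global and Universe Substitution lemmas for reduction), stability of well-formedness and typing under weakenings (taken for granted, as throughout), the \textbf{Weakening} law for the generic equivalence, the Universe Substitution lemma for level equivalence, and associativity of weakening composition, so that a weakening into $L' \sep \Phi;\Delta$ composes with $\psi$ to a weakening into $L \sep \Psi;\Gamma$.

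Most cases are pure plumbing. In the $\Nat$ case the weakened reducts and the generic equivalence of the weak head normal forms come from the stability lemmas, and the term clause is closed by a nested structural induction on the numeral-equivalence derivation $\ldtynfeqnat i w{w'}$, whose neutral leaf appeals again to the \textbf{Weakening} law. The neutral-type case is identical in spirit, using stability of the generic equivalences of neutral types and terms. The bureaucratic universe-level case is discharged by the structural induction hypothesis on its single sub-derivation $\E :: \ldtypeq i \proglevel T{T'}{l'}$ together with invariance of $l \approx l'$ under $\theta$. The universe case at level $1 + l$ is the one place where the transfinite recursion earns its keep: the term clause asks that the decodings $\Elt l w$ and $\Elt l{w'}$ be related at the \emph{smaller} level $l$, so we invoke the recursion hypothesis at $l$ to weaken that witness.

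The $\PI l{l'}x{S_1}{T_1}$ case carries the bulk of the bookkeeping yet — somewhat pleasantly — needs no recursion hypothesis at all: the higher-order components $\E$ and $\F$ of $\D$ are already universally quantified over \emph{all} weakenings into $L \sep \Psi;\Gamma$, so the components of the weakened derivation are obtained by precomposition, $\E'(\psi'') := \E(\psi \circ \psi'')$ and $\F'(\psi'', \Ac) := \F(\psi \circ \psi'', \Ac)$, using associativity of composition and transporting the reducibility witness $\Ac$ along the definitional equality $\E'(\psi'') = \E(\psi \circ \psi'')$. The first-order premises — the reductions of $T$, $T'$, the well-formedness of $S_1,S_2,T_1,T_2$, and the generic equivalence of the two $\Pi$-types — come straight from the stability lemmas, and the term clause for $\Pi$ is re-derived by the very same precompose-and-transport move on its function-application sub-clause, again without a recursive appeal.

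The main obstacle is therefore bookkeeping rather than mathematics: tracking the three-component weakenings $\psi = (\theta,\gamma,\tau)$ and their composites, performing the transports of reducibility witnesses across the definitional equalities $\E'(\psi'') = \E(\psi \circ \psi'')$ in the $\Pi$ case (and, once the development reaches $j = \metalevel$, in the contextual-type cases), and checking that the substitution-commutation identities for a $\Pi$-type's domain and codomain — e.g. that $T_1[s/x]$ behaves correctly under $\psi$ — line up. We also take care that the recursion is genuinely well-founded: every recursive appeal strictly decreases the universe level (the universe case), or the derivation $\D$ (the level-equivalence case), or the auxiliary numeral/neutral derivation (the $\Nat$ case), while the $\Pi$ and neutral cases recurse not at all.
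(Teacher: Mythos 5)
Your proof is correct and takes essentially the same approach as the paper, which simply says ``Induction on $\D$, using invariance of typing under weakening and the weakening law for generic equivalence''; your write-up fills in precisely the details that brief statement presupposes — the outer transfinite recursion on universe levels, the nested induction on the $\ldtynfeqnat$ derivation, and in particular the precomposition trick $\E'(\psi'') := \E(\psi \circ \psi'')$ and $\F'(\psi'',\Ac) := \F(\psi \circ \psi'',\Ac)$ which discharges the $\Pi$ case without any appeal to a recursion hypothesis. One small slip: the swap of $t$ and $t'$ in the conclusion of the statement is almost certainly a typo, but your remark that it is ``immaterial by the PER law'' reaches for the wrong tool — the PER law you invoke is the one for the \emph{generic} equivalence, whereas to literally honour the swap you would need symmetry of the \emph{logical} relation $\ldtyequiv i \proglevel{\cdot}{\cdot}{\D}$, and that lemma is only established a few pages later; the clean fix is simply to prove the unswapped conclusion and treat the statement as misprinted.
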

\begin{proof}
  Induction on $\D$. %
  Note that typing judgments are invariant under weakenings. %
  Also use the weakening law for generic equivalence. %
\end{proof}

\begin{lemma}[Escape] $ $
  \begin{itemize}
  \item If $\D :: \ldtypeq i \proglevel{T}{T'}{l}$, then $\lttypgeq i {T}{T'} l$.
  \item If $\ldtyequiv i \proglevel{t}{t'}\D$, then $\lttrmgeq i t{t'} T l$ and $\lttrmgeq i t{t'}{T'} l$.
  \end{itemize}  
\end{lemma}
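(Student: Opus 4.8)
The plan is to prove both statements simultaneously by induction on the derivation $\D :: \ldtypeq i \proglevel{T}{T'}{l}$, mirroring the proof of the analogous escape lemma in \Cref{sec:cv:logrel} (\emph{cf.} the reducibility-escape lemmas there). The key observation is that the logical relation for types carries, in every case of its inductive definition, an explicit generic-equivalence premise $\lttypgeq i{-}{-}{-}$ (or a neutral version subsumed by it via the Subsumption law), so the first bullet is essentially immediate in each case by projecting out that premise. The subtlety is entirely in the second bullet: the term relation $\ldtyequiv i \proglevel{t}{t'}\D$ is defined \emph{by recursion on} $\D$, so the induction hypothesis available to us is the escape property for the structurally smaller type derivations appearing in $\D$.

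First I would handle the base cases. For $\Nat$: unfolding $\ldtyeqnat i t{t'}$ we reduce $t \reds w$, $t' \reds w'$ with $\lttrmgeq i w{w'}\Nat 0$; applying the Weak Head Closure law for the generic equivalence gives $\lttrmgeq i t{t'}\Nat 0$, and since $\ttrmreds i t w\Nat 0$ also carries $\lttypwf i\Nat 0$ we are done (the $T,T'$ here are both reducts to $\Nat$, so Type Conversion handles the two versions). For the universe case, $\ldtypeq i\proglevel{T}{T'}{1+l}$ carries reductions to $\Ty{l_1},\Ty{l_2}$ with appropriate level equivalences; the term relation gives $\lttrmgeq i w{w'}{\Ty l}{1+l}$, and Weak Head Closure plus Type Conversion (to account for $\Ty{l_1}$ versus $\Ty l$) yields the goal. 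The neutral-type case is similar using the Subsumption law to pass from $\lttrmgneeq i \nu{\nu'}{V}{l}$ to $\lttrmgeq i \nu{\nu'}{V}{l}$, then Weak Head Closure. The bureaucratic universe-equivalence case is discharged by the inner induction hypothesis on $\E$ together with Type Conversion.

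The hard part will be the $\Pi$ case. Here $\ldtyequiv i\proglevel t{t'}\D$ only gives us that the weak head normal forms $w,w'$ behave well under \emph{application to related arguments}; to recover $\lttrmgeq i t{t'}{\PI l{l'} x{S_1}{T_1}}{l\sqcup l'}$ I would use the Congruence law for generic equivalence at function type, which reduces the goal to $\lttrmgeq[\Psi][\Gamma, x : S_1 \at l] i{\APP{t}{l}{l'}x{S_1}{T_1}{x}}{\APP{t'}{l}{l'}x{S_1}{T_1}{x}}{T_1}{l'}$ after weak-head-reducing $t,t'$ to $w,w'$ (using Weak Head Closure to lift back). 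To produce the required related argument I instantiate the Kripke quantifier in the definition of $\ldtyequiv i\proglevel t{t'}\D$ at the identity weakening together with the reflexivity witness $\ldtyequiv i\proglevel{x}{x}{\E(\mathrm{id})}$ for the fresh variable $x$ — which itself requires the neutral-reflexivity facts (Congruence for Neutrals gives $\lttrmgneeq i x x{S_1}l$, and then I need that neutrals of related type are in the type relation, i.e. the analogue of \Cref{lem:cvar:red-ne}; this should be available as a companion lemma in this section, or proved alongside). From the resulting $\ldtyequiv[\Psi][\Gamma,x:S_1\at l] i\proglevel{\APP w{\ldots}x}{\APP{w'}{\ldots}x}{\F(\mathrm{id},\Ac)}$ the inner induction hypothesis on $\F$ gives the needed generic equivalence of applications, and the $\F$-escape feeds into the first bullet as well. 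Throughout, I will lean on the Escape law only through the generic-equivalence laws (Subsumption, PER, Type Conversion, Context Equivalence, Weakening, Weak Head Closure, Congruence), and on the determinacy of reduction to align the reducts. The main obstacle, as noted, is threading the Kripke quantifiers and the neutral-reflexivity input cleanly in the $\Pi$ case so that the recursion on $\D$ stays well-founded.
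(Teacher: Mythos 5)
Your plan to project out the stored generic equivalence is exactly right for the base cases, but your treatment of the $\Pi$ case contains a genuine misconception about what the definition of $\ldtyequiv i \proglevel t{t'}\D$ provides, and the detour you propose to compensate for it is circular.

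You write that in the $\Pi$ case the term relation ``only gives us that the weak head normal forms $w,w'$ behave well under application to related arguments,'' and you then try to recover $\lttrmgeq i t{t'}{\PI l{l'}x{S_1}{T_1}}{l\sqcup l'}$ via the Congruence law by supplying a reflexive argument $\ldtyequiv[\Psi][\Gamma, x : S_1 \at l] i \proglevel x x{\E(\mathrm{id})}$. But the definition of $\ldtyequiv i j t{t'}\D$ at $\Pi$ explicitly \emph{stores} $\lttrmgeq i w{w'}{\PI l{l'}x{S_1}{T_1}}{l\sqcup l'}$ (and the symmetric variant with $S_2, T_2$) as standalone components, alongside the Kripke application condition. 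The $\Pi$ case is therefore handled uniformly with the others: project out the stored generic equivalence of weak head normal forms and close by Weak Head Closure. There is no need for the Congruence law or for a reflexive neutral argument at all, and the proof is a straightforward induction on $\D$ that cites only Subsumption and Weak Head Closure --- which is what the paper does.

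The danger in your route is not merely that it is longer. The neutral-reflexivity fact you would need --- that $\lttrmgneeq i \nu{\nu'} T l$ together with $\D :: \ldtypeq i \proglevel T{T'}l$ yields $\ldtyequiv i \proglevel \nu{\nu'}\D$ --- is stated in the paper as the Reflexivity of Neutral lemma, and its proof explicitly invokes the Escape lemma you are currently proving (to pass from the logical relation to generic equivalence of the arguments). Using it here would create a cycle. The definition of the logical relation is deliberately engineered to carry the escape witnesses inline precisely so that Escape can be proved first, Reflexivity of Neutral second, and neither needs the other out of order.
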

\begin{proof}
  Induction on $\D$. %
  Use the subsumption law of the generic equivalence. %
\end{proof}

\begin{lemma}[Reflexivity of Neutral]
  If $\D :: \ldtypeq i \proglevel{T}{T'}l$, $\lttrmgneeq i \nu{\nu'} T l$ and $\lttrmgneeq i \nu{\nu'}{T'} l$, then $\ldtyequiv i \proglevel{\nu}{\nu'}\D$.
\end{lemma}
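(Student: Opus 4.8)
The plan is to proceed by induction on the derivation $\D :: \ldtypeq i \proglevel T{T'}l$, treating each of the five constructors of the type logical relation in turn. Throughout I will use the \emph{Escape} lemma (to read off $\lttypgeq i T{T'}l$, and, in the $\Pi$ case, escape for related terms), the \emph{Weakening} lemma for the logical relation just proved, the \emph{Soundness} lemma for reduction (which turns $T\reds W$ into $\lttypeq i T W l$), the observation that neutral forms are already weak head normal (so $\nu\reds\nu$), and from the generic-equivalence law set: \emph{Subsumption}, \emph{Type Conversion}, \emph{Neutral Types}, \emph{Congruence for Neutrals}, together with routine presupposition/universe-level bookkeeping.

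The three "easy" cases go as follows. In the $\Nat$ case $T,T'$ reduce to $\Nat$ at level $0$, so Soundness plus Type Conversion give $\lttrmgneeq i \nu{\nu'}\Nat 0$; Subsumption then yields $\lttrmgeq i \nu{\nu'}\Nat 0$, and the neutral clause of $\ldtyeqnat i \nu{\nu'}$ (with $\nu\reds\nu$) closes the goal. In the neutral-type case, from $T\reds V$, $T'\reds V'$ and the two conversions along these reductions applied to the hypotheses $\lttrmgneeq i \nu{\nu'}Tl$, $\lttrmgneeq i \nu{\nu'}{T'}l$, the four conditions defining $\ldtyequiv i \proglevel \nu{\nu'}\D$ (reflexive reduction at $V$ and at $V'$, and $\lttrmgneeq$ at both) are immediate. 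In the universe case ($T\reds\Ty{l_1}$, $T'\reds\Ty{l_2}$ with $l_1\approx l\approx l_2$), after the analogous conversion to $\lttrmgneeq i \nu{\nu'}{\Ty l}{1+l}$ and Subsumption, the only real content is the side condition $\ldtypeq i \proglevel{\Elt l \nu}{\Elt l{\nu'}}l$, which follows from the \emph{Neutral Types} law (with $\tyequiv l l\Level$) and the neutral-type constructor of the type relation. The universe-level-equivalence case is purely bureaucratic: $\D$ is built from $\E :: \ldtypeq i \proglevel T{T'}{l'}$ with $\tyequiv l{l'}\Level$ and $\ldtyequiv i \proglevel \nu{\nu'}\D$ unfolds to $\ldtyequiv i \proglevel \nu{\nu'}\E$, so I apply the IH to $\E$, transporting the neutral hypotheses along the level equivalence (the generic neutral equivalence is insensitive to equivalent universe levels, matching the corresponding syntactic and algorithmic rules).

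The $\Pi$ case is the main obstacle. Here $T\reds\PI l{l'}x{S_1}{T_1}$ and $T'\reds\PI l{l'}x{S_2}{T_2}$, and after the usual conversion/Subsumption steps for the head I must show, for every weakening $\psi :: L'\sep\Phi;\Delta\To_i L\sep\Psi;\Gamma$, every $\Ac :: \ldtyequiv[\Phi][\Delta][L'] i \proglevel s{s'}{\E(\psi)}$, and every choice of the type-annotation variables with the built-in equivalences $\lttypgeq[\Phi][\Delta][L'] i{S_1}{S_1'}l$, $\lttypgeq[\Phi][\Delta][L'] i{S_2}{S_2'}l$, $\lttypgeq[\Phi][\Delta, x : S_1 \at l][L'] i{T_1}{T_1'}{l'}$, $\lttypgeq[\Phi][\Delta, x : S_2 \at l][L'] i{T_2}{T_2'}{l'}$, that $\ldtyequiv[\Phi][\Delta][L'] i \proglevel{\APP \nu l{l'}x{S_1'}{T_1'}s}{\APP{\nu'}l{l'}x{S_2'}{T_2'}{s'}}{\F(\psi,\Ac)}$. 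Since $\nu,\nu'$ are neutral, both applications are neutral, so I invoke the induction hypothesis at the sub-derivation $\F(\psi,\Ac) :: \ldtypeq[\Phi][\Delta][L'] i \proglevel{T_1[s/x]}{T_2[s'/x]}{l'}$; its two side requirements are $\lttrmgneeq[\Phi][\Delta][L'] i{\APP \nu l{l'}x{S_1'}{T_1'}s}{\APP{\nu'}l{l'}x{S_2'}{T_2'}{s'}}{T_1[s/x]}{l'}$ and the same at $T_2[s'/x]$. I assemble these from the \emph{Congruence for Neutrals} law for application, fed with: the weakened head hypothesis $\lttrmgneeq[\Phi][\Delta][L'] i \nu{\nu'}{\PI l{l'}x{S_1}{T_1}}{l\sqcup l'}$ (obtained from the Weakening lemma, preservation of reduction under weakening, Soundness and Type Conversion); the chained type equivalences $S_1'\approx S_2'$ and $T_1'\approx T_2'$ (PER plus the \emph{Context Equivalence} law for the context conversion forced by $S_1\approx S_1'$); and $\lttrmgeq[\Phi][\Delta][L'] i s{s'}{S_1'}l$ from Escape on $\Ac$ followed by Type Conversion. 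The resulting $\lttrmgneeq$ is typed at $T_1'[s/x]$, which equals $T_1[s/x]$ up to equivalence (local-substitution congruence for types), and at $T_2[s'/x]$ after a further conversion via Escape on $\F(\psi,\Ac)$; converting back gives exactly the two requirements. The symmetry-flavoured duplicated clauses in the definition need no extra argument since the laws are PERs. The delicate part is the bookkeeping of the inserted annotation variables $S_i',T_i'$ and the context conversions they trigger, but once the congruence law is in place this is routine; I expect no conceptual difficulty beyond keeping the conversions straight.
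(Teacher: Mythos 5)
Your proposal is correct and follows essentially the same strategy as the paper's (very terse) proof: induction on $\D$, routing the neutral hypotheses through the \emph{Type Conversion} and \emph{Subsumption} laws plus soundness of multi-step reduction, and in the $\Pi$ case invoking the IH at the sub-derivation $\F(\psi,\Ac)$ after using \emph{Congruence for Neutrals} to build the two required neutral-equivalence side conditions on the applications. You fill in considerably more detail than the paper does, but the decomposition, the key lemmas, and the handling of the delicate $\Pi$ case all match. Two small points you gloss over, though neither is a genuine gap: in the $\Pi$ case, obtaining the middle link $T_1 \approx T_2$ for the chain $T_1' \approx T_1 \approx T_2 \approx T_2'$ is itself nontrivial (you must escape from $\F$ applied to a reflexive variable $x$, which in turn requires a secondary use of the IH at the structurally smaller $\E$); and in the universe-level-equivalence case, you appeal to a "level-insensitivity" of generic neutral equivalence that is not actually listed among the laws, though both instantiations do satisfy it — the paper's own proof sketch silently relies on the same thing.
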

\begin{proof}
  Induction on $\D$. %
  Use the conversion and subsumption laws of generic equivalence and then the
  soundness of multi-step reductions and the escape lemma to obtain the equivalence
  between $\nu$ and $\nu'$ as normal forms. %
  In the function case, we use the congruence for neutral law of generic equivalence
  to show that the results of applying two neutral function are related.
\end{proof}

\begin{lemma}[Weak Head Expansion] $ $
  \begin{itemize}
  \item If $\D :: \ldtypeq i \proglevel{T}{T'}{l}$, $\ttypreds i{T_1} T l$ and $\ttypreds
    i{T_1'}{T'}l$, then \newline $\ldtypeq i \proglevel {T_1}{T_1'} l$.
  \item If $\ldtyequiv i \proglevel{t}{t'}\D$, $\ttrmreds i{t_1}{t}{T} l$ and $\ttrmreds
    i{t_1'}{t'}{T'}l$, then $\ldtyequiv i \proglevel {t_1}{t_1'} \D$.
  \end{itemize}  
\end{lemma}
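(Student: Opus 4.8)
The plan is to prove both statements by a single mutual induction on the derivation $\D :: \ldtypeq i \proglevel{T}{T'}{l}$. The crucial observation is that every clause of the Kripke logical relation is phrased entirely in terms of the \emph{weak head normal forms} reached by the bundled multi-step reductions $\ttypreds i T W l$ (for types) and $\ttrmreds i t w T l$ (for terms), never in terms of $T$, $T'$, $t$, $t'$ themselves except through those reductions. Consequently the only real work is to prepend the extra reduction steps $T_1 \reds T$ (resp. $t_1 \reds t$) to the bundled reductions and copy everything else. The two ingredients needed are transitivity of $\reds$ and the preservation lemma proved above, which together guarantee that the composite reduction $T_1 \reds W$ is again a legitimate instance of $\ttypreds i{T_1}{W}{l}$ (the well-formedness annotation $\lttypwf i{T_1}l$ is already supplied by the hypothesis $\ttypreds i{T_1}Tl$).

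For the type part I would case on the head constructor of $\D$. In each of the cases $\Nat$, $\Ty{\cdot}$, $\PI l{l'}{x}{S}{T}$, and the neutral case, $\D$ carries $\ttypreds i T W l$ to the appropriate head form $W$, together with auxiliary data (typing judgments for the subcomponents of $W$, a generic equivalence $\lttypgeq i W{W'}{l}$, and in the $\Pi$ case the Kripke families $\E$ and $\F$) that mention only $W$, $W'$ and their subterms. From $\ttypreds i{T_1}Tl$ and transitivity I obtain $\ttypreds i{T_1}Wl$, and symmetrically $\ttypreds i{T_1'}{W'}l$; feeding these into the same rule with all remaining data copied verbatim produces a derivation of $\ldtypeq i \proglevel{T_1}{T_1'}{l}$. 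The only case that actually invokes the induction hypothesis is the universe-level-equivalence wrapper (where $\D$ is built from $\E :: \ldtypeq i \proglevel{T}{T'}{l'}$ with $\tyequiv[L]l{l'}\Level$): there I apply the IH to the strict subderivation $\E$ and re-wrap. Note that $l \ne \omega$ throughout, since $\omega$ occurs only for universe-polymorphic types, which live at $j = \metalevel$; so the transfinite branch of the recursion is never touched here, and the universe-level recursion stays well-founded since $w$ (hence $\Elt l w$) is unchanged.

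For the term part I would observe that $\ldtyequiv i \proglevel{t}{t'}\D$ is defined by recursion on $\D$, and in every case its first defining conditions are reductions $\ttrmreds i t w T l$ and $\ttrmreds i{t'}{w'}{T'}l$ to weak head normal forms, while the remaining conditions speak only of $w$, $w'$ (and, in the $\Pi$ case, of the quantified Kripke data about $\APP w{l}{l'}{x}{S_1'}{T_1'}{s}$, which is unaffected). Since $\ttrmreds i{t_1}{t}{T}l$ and transitivity give $\ttrmreds i{t_1}{w}{T}l$, and likewise $\ttrmreds i{t_1'}{w'}{T'}l$, all the defining conditions of $\ldtyequiv i \proglevel{t_1}{t_1'}\D$ hold against the \emph{same} $\D$; the universe case appeals to the (unchanged) recursion on universe levels, and the wrapper case to the term-part IH.

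The main obstacle is really only the bookkeeping around the layered, universe-level-indexed, inductive-recursive structure: one must check that ``rebuilding'' a derivation respects well-foundedness (it does — we only extend reduction prefixes, and in the wrapper case recurse on a strict subderivation), and that preservation is invoked at the correct layer, which is legitimate here because $i \in \{\proglevel,\metalevel\}$ and hence $\compt i$ holds so the relevant reductions and lemmas apply. There is no deep difficulty; the lemma amounts to the closure of each clause of the logical relation under prepending weak head reduction steps.
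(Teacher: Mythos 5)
Your proposal is correct and takes the same approach as the paper, whose proof reads in its entirety ``Induction on $\D$. Use transitivity of reductions.'' Your elaboration — that each clause of the relation is stated against the weak head normal form reached by the bundled reduction, so the only work is prepending the extra steps via transitivity (with the universe-level wrapper case as the sole use of the IH, and $\omega$ never arising at $j=\proglevel$) — is an accurate account of why that one-line proof succeeds.
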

\begin{proof}
  Induction on $\D$. Use transitivity of reductions. 
\end{proof}

\begin{lemma}[Symmetry] $ $
  \begin{itemize}
  \item If $\D :: \ldtypeq i \proglevel{T}{T'}{l}$, then $\E :: \ldtypeq i \proglevel{T'}{T}{l}$.
  \item If $\ldtyequiv i \proglevel{t}{t'}\D$, then $\ldtyequiv i \proglevel{t'}{t}\E$. 
  \end{itemize}
\end{lemma}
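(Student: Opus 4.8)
The plan is to prove the two statements simultaneously by induction on the derivation $\D :: \ldtypeq i \proglevel{T}{T'}{l}$, analysing its last rule. In each case the symmetric type derivation $\E$ is produced by swapping the two sides of $\D$, and the side conditions of the corresponding rule are discharged using the PER law of the generic equivalence (so that the generic equivalences on types, neutral types, terms, and neutral terms, as well as the equivalence of universe levels, are all symmetric), together with the inductive hypotheses. The accompanying term statement is then read off the recursive definition of $\ldtyequiv i \proglevel{-}{-}{-}$ against $\E$. One preparatory remark: I would actually prove the term clause as a biconditional, $\ldtyequiv i \proglevel t{t'}\D$ if and only if $\ldtyequiv i \proglevel{t'}{t}\E$; this costs nothing, since the two directions are mirror images, but it makes the inductive hypothesis usable in both directions, which the $\Pi$ case needs.

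First I would dispatch the leaf cases. For the $\Nat$ rule, swapping the reduction premises $\ttypreds i T{\Nat}0$ and $\ttypreds i{T'}{\Nat}0$ yields $\E$; the term clause needs an auxiliary induction on the inductive definition of semantic natural numbers, whose $\ze$, $\su$, and neutral subcases are closed by symmetry of equality and the PER law. For the universe rule, I would swap the reductions and the two level-equivalence premises; the term clause needs symmetry of the head equivalence together with the inductive hypothesis applied to $\ldtypeq i \proglevel{\Elt l w}{\Elt l{w'}}{l}$ at the strictly smaller universe level $l$ (this is where the well-founded recursion on levels is exercised). For the neutral-type rule, swapping $V$ and $V'$ in the two reductions and using symmetry of the generic equivalence on neutral types gives $\E$; because the term clause keeps both neutrality premises $\lttrmgneeq i \nu{\nu'}{V}{l}$ and $\lttrmgneeq i \nu{\nu'}{V'}{l}$ as separate conjuncts, symmetrising it is merely a swap plus the PER law — exactly the use the duplicated conjunct was designed for. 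The bureaucratic rule for equivalent universe levels is closed by a direct appeal to the inductive hypothesis on its premise $\ldtypeq i \proglevel{T}{T'}{l'}$.

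The main obstacle is the $\Pi$ rule, because the sub-derivation $\F$ of $\D$ is indexed by an argument-relatedness proof living in the \emph{domain} relation. To build the symmetrised codomain component $\F'$ for the swapped $\Pi$ rule I would proceed as follows: given a weakening $\psi$ and a proof $\Bc$ of $\ldtyequiv i \proglevel s{s'}{\E'(\psi)}$, where $\E'$ is the inductively obtained symmetrisation of $\D$'s domain component, apply the backward direction of the (biconditional) inductive hypothesis for that domain component to turn $\Bc$ into $\ldtyequiv i \proglevel{s'}{s}{\E(\psi)}$; feed this to the original $\F(\psi, \,-\,)$ to obtain $\ldtypeq i \proglevel{T_1[s'/x]}{T_2[s/x]}{l'}$; and apply the inductive hypothesis on that derivation to flip it to $\ldtypeq i \proglevel{T_2[s/x]}{T_1[s'/x]}{l'}$, which is the conclusion demanded of $\F'$. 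The head-equivalence conjunct of the $\Pi$ rule follows from the PER law. For the term clause of the $\Pi$ case, the duplicated premise on $w, w'$ at the second $\Pi$ type and the annotation-equivalence assumptions (the $\lttypgeq i{S_1}{S_1'}l$ family) are precisely what let the swap go through: the swapped application lands in $\F'$ by the construction just described, and the equivalences of the type annotations are symmetric by the PER law. Beyond this case I expect no subtlety; the whole argument rests only on the PER law, the two now-bidirectional inductive hypotheses, and, in the universe case, the transfinite well-founded recursion on universe levels set up in \Cref{sec:dt:ulevel}.
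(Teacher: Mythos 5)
Your proof follows the same route the paper intends (a simultaneous induction on $\D$, exploiting the deliberate redundancy built into the definitions to discharge the generic-equivalence conjuncts via the PER law), but where the paper's justification stops at ``immediate by design,'' you have correctly isolated the one genuinely non-trivial point: in the $\Pi$ case, constructing the symmetrised codomain component $\F'$ forces you to convert a proof of $\ldtyequiv i \proglevel s{s'}{\E'(\psi)}$ (where $\E'$ symmetrises the domain relation $\E(\psi)$) back into a proof of $\ldtyequiv i \proglevel{s'}{s}{\E(\psi)}$. At this point in the development the irrelevance lemmas are not yet available, so the one-directional statement of the term clause does not give you this; your strengthening to a biconditional (or, equivalently, proving that symmetrisation is an involution alongside) is exactly what closes this gap without appealing to anything proved later. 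The remaining cases — $\Nat$ with its inner induction on the normal-form relation, the universe case recursing at a strictly smaller level, the neutral and level-conversion cases — are handled just as the paper's design anticipates. Your argument is correct, and arguably a more honest account of what the paper's terse ``rather immediate'' actually requires.
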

\begin{proof}
  We do induction on $\D$. %
  Since our definition is designed with symmetry in mind, symmetry is rather immediate.
\end{proof}

\begin{lemma}[Right Irrelevance]
  If $\D :: \ldtypeq i \proglevel{T}{T'}{l}$, $\E :: \ldtypeq i \proglevel{T}{T''}{l}$ and
  $\ldtyequiv i \proglevel{t}{t'}\D$, then $\ldtyequiv i \proglevel{t}{t'}\E$.
\end{lemma}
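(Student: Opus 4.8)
The plan is to prove this by induction on the derivation $\D :: \ldtypeq i \proglevel T{T'}l$, with an inner transfinite well-founded recursion on the universe level $l$ (the same recursion that makes the definition of the logical relation well-founded). The load-bearing observation is that $\D$ and $\E$ share the \emph{left} type $T$: by Determinacy of multi-step reduction, $T$ reduces to one and the same weak head normal form along both derivations, so $\E$ must have been formed by the same rule as $\D$. Before the case split I would peel off any trailing uses of the bureaucratic universe-level-equivalence rule on either side --- they leave $T$, $T'$ and the induced term relation unchanged and are handled by the inner IH --- reducing to the situation where $\D$ and $\E$ both land in one of the cases $\Nat$, $\Ty l$, a dependent function type, or a neutral type. (There are no contextual-type cases here: when $j = \proglevel$ the logical relation ranges over MLTT types only.)

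The $\Nat$ case is immediate, since both $\ldtyequiv i \proglevel{t}{t'}\D$ and $\ldtyequiv i \proglevel{t}{t'}\E$ unfold to $\ldtyeqnat i t{t'}$. In the universe case the unfoldings through $\D$ and through $\E$ impose the same clauses on $t,t'$: namely $t$ and $t'$ reduce to some $w,w'$, $w$ and $w'$ are generically equivalent at $\Ty l$, and $\ldtypeq i \proglevel{\Elt l w}{\Elt l{w'}}{l}$ holds; the last clause is a proposition not attached to any particular derivation, and the only thing to check is that the inner recursion has descended to a strictly smaller level, which it has. The neutral case is almost as easy: the reduction clauses and the \emph{left}-type neutral-equivalence clause $\lttrmgneeq i \nu{\nu'}Vl$ are literally shared (same neutral head $V$, by Determinacy), and the only discrepancy is the duplicated \emph{right}-type clause, which mentions $V'$ for $\D$ and $V''$ for $\E$; since the Subsumption law turns the two stored neutral-type equivalences into $\lttypeq i V{V'}l$ and $\lttypeq i V{V''}l$, the PER law gives $\lttypeq i{V'}{V''}l$, and the Type Conversion law transports the clause. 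Exactly this Subsumption/PER/Type-Conversion triangle will also reconcile the duplicated right-type membership clause in the function case.

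The main obstacle is the dependent-function case. Determinacy pins down the common left reduct, so $\D$ and $\E$ have identical left domain $S_1$ and left codomain family $T_1$; the stored Kripke families then relate $S_1$ against $\D$'s right domain and $S_1$ against $\E$'s right domain, and, for each weakening $\psi$ and each related argument pair, relate $T_1[s/x]$ against the respective right codomains. The codomain step goes through by the IH, since those families are strict sub-derivations of $\D$. The delicate point is that the domain occurs contravariantly: to invoke $\D$'s application clause I must turn the hypothesis that $s,s'$ are related by $\E$'s domain family into the statement that they are related by $\D$'s domain family --- i.e. transport a term membership \emph{backwards} across two domain derivations that share a left type. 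A bare one-sided induction on $\D$ only supplies this transport forwards. I would resolve this by observing that the lemma is its own converse (swapping $\D$ and $\E$ is legitimate, as both still have left type $T$) and correspondingly carrying, under the induction, the biconditional strengthening ``$\D$ and $\E$ induce the same term relation''; the biconditional IH, applied to the domain sub-derivation of $\D$, is then usable in both directions. The residual type-annotation mismatch in the applied neutral forms $\APP w l{l'}x S T s$ is absorbed by the extra universe- and type-equivalence assumptions built into the function clause of the logical relation (as in the excerpt's definition of $\ldtyequiv$ at a $\Pi$ type), and combining these observations --- together with Weak Head Expansion to recover the top-level reduction clauses --- yields $\ldtyequiv i \proglevel{t}{t'}\E$. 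This same scheme, alignment by Determinacy plus biconditional strengthening for contravariant positions, is what the downstream transitivity and conversion lemmas for the logical relation will rely on.
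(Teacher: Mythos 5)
Your approach is the same as the paper's --- induction on $\D$ with inversion on $\E$, Determinacy to pin the common left weak head normal form so that $\D$ and $\E$ land in the same case, and the $\Pi$ case carrying the real difficulty --- and your treatment of the contravariant domain is a genuine refinement the paper glosses over. The paper's proof claims to derive $\Bc :: \ldtyequiv[\Phi][\Delta][L'] i \proglevel{s}{s'}{\D_1(\psi)}$ from $\Ac :: \ldtyequiv[\Phi][\Delta][L'] i \proglevel{s}{s'}{\E_1(\psi)}$ ``by IH'', but a one-sided structural induction on $\D$ only yields the forward transport (from $\D$'s sub-derivations outward), not the needed $\E_1 \to \D_1$ direction; either the biconditional strengthening you adopt, or an explicit well-founded measure on the common left whnf, is required to make this step honest, and your iff formulation is a clean way to do it. Two small inaccuracies: in the neutral case the reduction clauses are not ``literally shared'' --- $\ttrmreds i{t'}{\nu'}{V'}l$ records the \emph{right} whnf type, which differs between $\D$ and $\E$, and must be repaired by the same Subsumption/PER/Type-Conversion chain via the syntactic conversion rule; and in the $\Pi$ case the discrepant top-level reduction and $\lttrmgeq$ clauses likewise want type conversion, not Weak Head Expansion, which is the wrong tool there since no new reduction steps are being introduced.
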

\begin{proof}
  We do induction on $\D$ and then invert $\E$. %
  We consider the function case. %
  In this case, we have premises
  \begin{align*}
    \D_1 &:: (\forall~ \psi :: L' \sep \Phi;\Delta \To_i L \sep \Psi; \Gamma ~.~
    \ldtypeq[\Phi][\Delta][L'] i \proglevel{S_1}{S_2}{l_1}) \\
    \D_2 &:: (\forall~ \psi :: L' \sep \Phi;\Delta \To_i L \sep \Psi; \Gamma ~.~ \forall~ \ldtyequiv[\Phi][\Delta][L']
    i \proglevel {s}{s'}{\D_1(\psi)} ~.~
    \ldtypeq[\Phi][\Delta][L'] i \proglevel{T_1[s/x]}{T_2[s'/x]}{l_2}) \\
    \E_1 &:: (\forall~ \psi :: L' \sep \Phi;\Delta \To_i L \sep \Psi; \Gamma ~.~
    \ldtypeq[\Phi][\Delta][L'] i \proglevel{S_1}{S_2'}{l_1}) \\
    \E_2 &:: (\forall~ \psi :: L' \sep \Phi;\Delta \To_i L \sep \Psi; \Gamma ~.~ \forall~ \ldtyequiv[\Phi][\Delta][L']
    i \proglevel {s}{s'}{\E_1(\psi)} ~.~
    \ldtypeq[\Phi][\Delta][L'] i \proglevel{T_1[s/x]}{T_2'[s'/x]}{l_2})
  \end{align*}
  from $\D$ and $\E$. %
  Here $\D_1$ and $\D_2$ are the premises of $\D$, and likewise for $\E$. %
  By determinacy, we know that
  \[
    T \reds \PI{l_1}{l_2} x{S_1}{T_1}
  \]
  must be unique.  %
  Moreover, we also know
  \begin{gather*}
    T' \reds \PI{l_1}{l_2} x{S_2}{T_2} \\
    T'' \reds \PI{l_1}{l_2} x{S_2'}{T_2'}
  \end{gather*}
  The most difficult part is to show that given
  \begin{itemize}
  \item $\psi :: L' \sep \Phi;\Delta \To_i L \sep \Psi; \Gamma$,    
  \item $\Ac :: \ldtyequiv[\Phi][\Delta][L'] i \proglevel {s}{s'}{\E_1(\psi)}$
  \item $\lttypgeq[\Phi][\Delta][L'] i{S_1}{S_3}l$,
  \item $\lttypgeq[\Phi][\Delta][L'] i{S_2}{S_4}l$, 
  \item $\lttypgeq[\Phi][\Delta, x : S_1 \at{l_1}][L'] i{T_1}{T_3}{l_2}$, and
  \item $\lttypgeq[\Phi][\Delta, x : S_2' \at{l_1}][L'] i{T_2'}{T_4}{l_2}$,
  \end{itemize}
  then
  \[
    \ldtyequiv[\Phi][\Delta][L'] i \proglevel{\APP w{l_1}{l_2} x{S_3}{T_3} s}{\APP{w'}{l_1}{l_2}
      x{S_4}{T_4}{s'}}{\E_2(\psi, \Ac)}
  \]

  From $\ldtyequiv i \proglevel{t}{t'}\D$, together with
  \begin{itemize}
  \item $\Bc :: \ldtyequiv[\Phi][\Delta][L'] i \proglevel {s}{s'}{\D_1(\psi)}$ by IH,
  \item $\lttypgeq[\Phi][\Delta, x : S_2 \at{l_1}][L'] i{T_2}{T_4}{l_2}$, where we
    know $\lttypeq i{S_2}{S_2'}{l_1}$ by escape, the subsumption law of generic
    equivalence, transitivity and local context equivalence of syntactic and generic
    equivalence. %
  \end{itemize}
  we have
  \[
    \ldtyequiv[\Phi][\Delta][L'] i \proglevel{\APP w{l_1}{l_2} x{S_3}{T_3} s}{\APP{w'}{l_1}{l_2}
      x{S_4}{T_4}{s'}}{\D_2(\psi, \Bc)}
  \]
  By another IH, we have the goal.
\end{proof}

\begin{lemma}[Left Irrelevance]
  If $\D :: \ldtypeq i \proglevel{T'}{T}{l}$, $\E :: \ldtypeq i \proglevel{T''}{T}{l}$ and
  $\ldtyequiv i \proglevel{t}{t'}\D$, then $\ldtyequiv i \proglevel{t}{t'}\E$.
\end{lemma}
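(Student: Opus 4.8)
The plan is to avoid a second structural induction and instead reduce Left Irrelevance to the already-established Right Irrelevance by three applications of Symmetry. The guiding observation is that Symmetry simultaneously flips a type-relatedness derivation $\ldtypeq i \proglevel{-}{-}{l}$ and transports the term relation across it, so by symmetrizing we can always arrange for the ``pivot'' type $T$ to sit on the \emph{left} of both derivations, which is precisely the configuration covered by Right Irrelevance.

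Concretely, first I would apply Symmetry to $\D :: \ldtypeq i \proglevel{T'}{T}{l}$ to obtain $\D_1 :: \ldtypeq i \proglevel{T}{T'}{l}$, and use the term part of Symmetry to transport the hypothesis $\ldtyequiv i \proglevel{t}{t'}\D$ to $\ldtyequiv i \proglevel{t'}{t}{\D_1}$. Likewise, applying Symmetry to $\E :: \ldtypeq i \proglevel{T''}{T}{l}$ yields $\E_1 :: \ldtypeq i \proglevel{T}{T''}{l}$. Now $\D_1$ and $\E_1$ share the common type $T$ on the left, so Right Irrelevance applies directly (to the term pair $(t',t)$) and gives $\ldtyequiv i \proglevel{t'}{t}{\E_1}$.

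Next I would apply Symmetry once more, this time to $\E_1$, producing a derivation $\E_2 :: \ldtypeq i \proglevel{T''}{T}{l}$ together with $\ldtyequiv i \proglevel{t}{t'}{\E_2}$. It remains to replace $\E_2$ by the given derivation $\E$, which derives the \emph{same} judgment $\ldtypeq i \proglevel{T''}{T}{l}$; this is again an instance of Right Irrelevance, taken in its degenerate form where the second and third types coincide (both equal to $T$), so that $\ldtyequiv i \proglevel{t}{t'}{\E_2}$ implies $\ldtyequiv i \proglevel{t}{t'}\E$, which is the goal. As a by-product this records the standalone fact that the term relation depends only on the \emph{judgment} $\ldtypeq i \proglevel{T}{T'}{l}$ and not on the particular derivation witnessing it.

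There is no new semantic content here and, in particular, no induction on $\D$; the only thing requiring care — and the step I would double-check — is the bookkeeping of the direction of each Symmetry invocation and of which type plays the role of the pivot in each use of Right Irrelevance. As an alternative, one could instead mirror the proof of Right Irrelevance verbatim by induction on $\D$ and inversion on $\E$, swapping the roles of the left and right types throughout, with the function case going through exactly as before but now invoking Left Irrelevance as the inductive hypothesis for the domain relation; the symmetry-based argument above is shorter and reuses work already done.
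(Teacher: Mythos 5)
Your proof is correct and matches the paper's own proof, which is exactly "Immediate by symmetry and right irrelevance." The one place where you go beyond the paper's terse statement — the final degenerate application of Right Irrelevance to trade the derivation $\E_2$ (produced by Symmetry twice) for the given $\E$ of the same judgment — is a real detail the paper elides, since Symmetry is not an involution on derivations, and you are right to flag and discharge it.
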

\begin{proof}
  Immediate by symmetry and right irrelevance. 
\end{proof}

The left and right irrelevance lemmas are called the irrelevance lemma. %
It says that the exact relation between types is not important as long as their normal
forms are related. %

\begin{lemma}[Reflexivity and Transitivity] $ $
  \begin{itemize}
  \item If $\D_1 :: \ldtypeq i \proglevel{T_1}{T_2}{l}$ and $\D_2 :: \ldtypeq i
    \proglevel{T_2}{T_3}{l}$, then $\D_3 :: \ldtypeq i \proglevel{T_1}{T_3}{l}$.
  \item If $\E :: \ldtypeq i \proglevel{T_1}{T_1}{l}$, $\ldtyequiv i \proglevel{t_1}{t_2}{\D_1}$ and
    $\ldtyequiv i \proglevel{t_2}{t_3}{\D_2}$, then $\ldtyequiv i \proglevel{t_1}{t_3}{\D_3}$.
  \item $\F :: \ldtypeq i \proglevel{T_1}{T_1}{l}$.
  \item If $\ldtyequiv i \proglevel{t_1}{t_2}{\D_1}$, then $\ldtyequiv i \proglevel{t_1}{t_1}{\F}$.
  \end{itemize}
\end{lemma}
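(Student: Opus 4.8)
The plan is to prove all four statements together by a single mutual induction that follows the well-founded recursion underlying the definition of the logical relations: the induction is ordered lexicographically first by the universe level $l$ (using the transfinite well-founded order of \Cref{sec:dt:ulevel}) and then by the inductive--recursive structure of the type-relatedness derivation $\D_1$ (resp. $\E$). The two transitivity statements, items~1 and 2, will carry the induction; items~3 and 4 will be obtained, at whatever instance they are needed, as short consequences of items~1 and 2 together with the already-established Symmetry lemma and the Left/Right Irrelevance lemmas, so that they are freely available at every strictly smaller instance (the auxiliary hypothesis $\E :: \ldtypeq i \proglevel{T_1}{T_1}{l}$ in item~2 is just the reflexive witness from item~3, threaded through to avoid re-deriving it). Throughout, the two workhorses are Determinacy of weak head reduction (single- and multi-step), which forces $\D_1$ and $\D_2$ to agree on the weak head normal form of the middle type $T_2$ and, in the term cases, on that of the middle term $t_2$, and the PER law of the generic equivalence, which handles the generic-equivalence sides of every case.

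For transitivity of types (item~1) I would case-split on $\D_1$. The $\Nat$ case, the neutral-type case, and the bureaucratic universe-level case all follow after aligning the two derivations by determinacy and applying the PER law (and the IH at the smaller level in the last case). The universe case $\Ty l$ has no sub-derivation, so $\D_3$ is assembled directly from the reductions and the level equivalences. The $\Pi$ case is the substantial one: determinacy makes $\D_1$ reduce $T_1,T_2$ to $\PI l{l'}x{S_1}{T_1'}$ and $\PI l{l'}x{S_2}{T_2'}$ and $\D_2$ reduce $T_2,T_3$ to the same $\PI l{l'}x{S_2}{T_2'}$ and $\PI l{l'}x{S_3}{T_3'}$, with the head equivalence of $T_1,T_3$ coming from the PER law. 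For the domain family, under a weakening $\psi$ I would compose $\E_{\D_1}(\psi) :: \ldtypeq i \proglevel{S_1}{S_2}{l}$ with $\E_{\D_2}(\psi) :: \ldtypeq i \proglevel{S_2}{S_3}{l}$ by the IH (item~1 at the strictly smaller premise derivations). For the codomain family, given $\psi$ and terms $s,s'$ related in the composed domain relation, I would move $s \approx s'$ into $\E_{\D_1}(\psi)$ by Right Irrelevance and into $\E_{\D_2}(\psi)$ by Left Irrelevance, use item~4 at the strictly smaller derivation $\E_{\D_1}(\psi)$ to obtain $s\approx s$ there, feed $\F_{\D_1}$ the witness $(s,s)$ to get $\ldtypeq i \proglevel{T_1'[s/x]}{T_2'[s/x]}{l'}$ and $\F_{\D_2}$ the witness $(s,s')$ to get $\ldtypeq i \proglevel{T_2'[s/x]}{T_3'[s'/x]}{l'}$; since these share the middle type $T_2'[s/x]$, the IH (item~1) composes them into $\F_{\D_3}(\psi,\cdot)$.

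Transitivity of terms (item~2) runs in parallel, recursing on $\D_1$. For $\Nat$, determinacy pins down the common weak head normal form of $t_2$, after which an inner induction on how that normal form inhabits the semantic natural numbers handles the $\ze$ case, the $\tsucc$ case recursively, and the neutral leaf via the PER law for neutrals. For the universe case I would compose the decoded-type relations $\ldtypeq i \proglevel{\Elt l w}{\Elt l{w'}}{l}$ and $\ldtypeq i \proglevel{\Elt l{w'}}{\Elt l{w''}}{l}$ by the IH at the strictly smaller level $l$. The $\Pi$ case mirrors item~1: align via determinacy, and for the applied-function condition feed $\F_{\D_1}$ and $\F_{\D_2}$ witnesses produced via the Irrelevance lemmas and via item~4 at the smaller domain derivation, composing the results with the IH (item~2); Weak Head Expansion is used to realign reductions when needed. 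The neutral and level cases reduce to the PER law and the IH.

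Items~3 and 4 are then short corollaries. Item~3: Symmetry turns $\D_1 :: \ldtypeq i \proglevel{T_1}{T_2}{l}$ into a derivation of $\ldtypeq i \proglevel{T_2}{T_1}{l}$, and item~1 composes the two into $\F :: \ldtypeq i \proglevel{T_1}{T_1}{l}$. Item~4: Symmetry turns $\ldtyequiv i \proglevel{t_1}{t_2}{\D_1}$ into $\ldtyequiv i \proglevel{t_2}{t_1}{\D_1'}$ for $\D_1' :: \ldtypeq i \proglevel{T_2}{T_1}{l}$, item~2 composes with the original into $\ldtyequiv i \proglevel{t_1}{t_1}{\D_3'}$ for some $\D_3' :: \ldtypeq i \proglevel{T_1}{T_1}{l}$, and Right Irrelevance transports this onto the canonical $\F$ of item~3. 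The step I expect to be the main obstacle is exactly the $\Pi$ case of transitivity: choosing the codomain witnesses so that the substituted middle type genuinely matches on both sides, and discharging the mutual appeals to reflexivity (item~4) and to the Irrelevance lemmas only on strictly smaller premise derivations, so that the lexicographic order (level, then derivation structure) remains well-founded --- the universe case being the only place where the level strictly decreases, which is what makes the recursion on decoded types legitimate.
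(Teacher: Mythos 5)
Your proposal is essentially the paper's proof. The paper also performs induction on $\D_1$, inverts $\D_2$, aligns the two derivations via determinacy, and — in the $\Pi$ case, the only substantial one — carries out exactly your two-step composition: feed $\Ac_2$ the pair $(s,s)$ obtained by reflexivity plus irrelevance, feed $\Bc_2$ the pair $(s,s')$ obtained by irrelevance alone, and recombine with the inductive transitivity; reflexivity is then discharged at the end as a corollary of symmetry and transitivity (for types) and of irrelevance on top of that (for terms), just as you describe. Your lexicographic order (universe level first, then derivation structure) and your explicit note that the universe case is the only place where the level strictly decreases make the underlying well-founded recursion more precise than the paper's terse "induction on $\D_1$," but it is the same argument.
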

\begin{proof}
  We do induction on $\D_1$ and then invert $\D_2$. %
  The function case is the most complex one. %
  We have the following premises:
  \begin{align*}
    \Ac_1 &:: (\forall~ \psi :: L' \sep \Phi;\Delta \To_i L \sep \Psi; \Gamma ~.~
    \ldtypeq[\Phi][\Delta][L'] i \proglevel{S_1}{S_2}{l_1}) \\
    \Ac_2 &:: (\forall~ \psi :: L' \sep \Phi;\Delta \To_i L \sep \Psi; \Gamma ~.~ \forall~ \ldtyequiv[\Phi][\Delta][L']
    i \proglevel {s}{s'}{\Ac_1(\psi)} ~.~
    \ldtypeq[\Phi][\Delta][L'] i \proglevel{S_1'[s/x]}{S_2'[s'/x]}{l_2}) \\
    \Bc_1 &:: (\forall~ \psi :: L' \sep \Phi;\Delta \To_i L \sep \Psi; \Gamma ~.~
    \ldtypeq[\Phi][\Delta][L'] i \proglevel{S_2}{S_3}{l_1}) \\
    \Bc_2 &:: (\forall~ \psi :: L' \sep \Phi;\Delta \To_i L \sep \Psi; \Gamma ~.~ \forall~ \ldtyequiv[\Phi][\Delta][L']
    i \proglevel {s}{s'}{\E_1(\psi)} ~.~
    \ldtypeq[\Phi][\Delta][L'] i \proglevel{S_2'[s/x]}{S_3'[s'/x]}{l_2})
  \end{align*}
  from $\D$ and $\E$. %
  Here $\Ac_1$ and $\Ac_2$ are the premises of $\D_1$, and $\Bc_1$ and $\Bc_2$ are from $\D_2$. %
  By determinacy, we know that
  \begin{gather*}
    T_1 \reds \PI{l_1}{l_2} x{S_1}{S_1'} \\
    T_2 \reds \PI{l_1}{l_2} x{S_2}{S_2'} \\
    T_3 \reds \PI{l_1}{l_2} x{S_3}{S_3'}
  \end{gather*}

  Now, we should construct the transitivity for types and terms at the same time to
  understand how this proof is going to check out. %
  First we let 
  \[
    \Cc_1 :: (\forall~ \psi :: L' \sep \Phi;\Delta \To_i L \sep \Psi; \Gamma ~.~
    \ldtypeq[\Phi][\Delta][L'] i \proglevel{S_1}{S_3}{l_1})
  \]
  be the result of IH on $\Ac_1$ and $\Bc_1$. %
  Then our goal is to show that 
  
  The most difficult part is to show that given
  \begin{itemize}
  \item $\psi :: L' \sep \Phi;\Delta \To_i L \sep \Psi; \Gamma$,    
  \item $\F :: \ldtyequiv[\Phi][\Delta][L'] i \proglevel {s}{s'}{\Cc_1(\psi)}$
  \item $\lttypgeq[\Phi][\Delta][L'] i{S_1}{S_4}l$,
  \item $\lttypgeq[\Phi][\Delta][L'] i{S_3}{S_5}l$, 
  \item $\lttypgeq[\Phi][\Delta, x : S_1 \at{l_1}][L'] i{S_1'}{S_4'}{l_2}$, and
  \item $\lttypgeq[\Phi][\Delta, x : S_3 \at{l_1}][L'] i{S_3'}{S_5'}{l_2}$,
  \end{itemize}
  then
  \begin{itemize}
  \item
    \[
      \Cc_2 :: \ldtypeq[\Phi][\Delta][L'] i \proglevel{S_1'[s/x]}{S_3'[s'/x]}{l_2}
    \]
  \item 
    \[
      \ldtyequiv[\Phi][\Delta][L'] i \proglevel{\APP w_1{l_1}{l_2} x{S_4}{S_4'} s}{\APP{w_3}{l_1}{l_2}
        x{S_5}{S_5'}{s'}}{\Cc_2}
    \]
    where $t_k \reds w_k$ for $k \in \{1, 2, 3\}$. 
  \end{itemize}
  Our plan is the following:
  \begin{enumerate}
  \item We first relate $S_1'[s/x]$ and $S_2'[s/x]$, resp.
    $\APP {w_1}{l_1}{l_2} x{S_4}{S_4'} s$ and $\APP {w_2}{l_1}{l_2} x{S_4}{S_4'} s$, which
    requires
    \[
      \ldtyequiv[\Phi][\Delta][L'] i \proglevel {s}{s}{\Ac_1(\psi)}
    \]
    which is derived from reflexivity and irrelevance;
  \item and then relate $S_2'[s/x]$ and $S_3'[s/x]$, resp.
    $\APP {w_2}{l_1}{l_2} x{S_4}{S_4'} s$ and $\APP {w_2}{l_1}{l_2} x{S_5}{S_5'}{s'}$,
    which requires
    \[
      \ldtyequiv[\Phi][\Delta][L'] i \proglevel {s}{s'}{\Bc_1(\psi)}
    \]
    which is immediate from irrelevance.
  \end{enumerate}
  The final missing piece is reflexivity. %
  For types, it is just a result from transitivity and symmetry. %
  For terms, it is reflexivity and transitivity of related types and then
  irrelevance. %
  
  This concludes the function case. 
\end{proof}

Next, we work on the properties for related local contexts and local substitutions. %
We first consider the built property of weakening. %
Weakening can be seen as a case of the Yoneda lemma, where we only depend on
composition of weakenings.
\begin{lemma}[Weakening] $ $
  \begin{itemize}
  \item If $\D :: \ldctxeq i \proglevel \Delta{\Delta'}$ and  $\alpha :: L' \sep \Phi \To L \sep
    \Psi$, then $\E :: \ldctxeq[\Phi][L'] i \proglevel \Delta{\Delta'}$.
  \item If $\ldsubeq i \proglevel{\delta}{\delta'}\D$, $\alpha :: L' \sep \Phi \To L \sep \Psi$
    and $\tau :: L' \sep \Phi; \Gamma' \To_i \Gamma$, then
    $\ldsubeq[\Phi][\Gamma'][L'] i \proglevel{\delta}{\delta'}\E$.
  \end{itemize}
\end{lemma}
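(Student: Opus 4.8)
The plan is to prove both statements simultaneously by induction on the derivation $\D$ of $\ldctxeq i \proglevel \Delta{\Delta'}$, mirroring the Weakening lemma for types proved immediately above and the Monotonicity results for the simply-typed case in \Cref{sec:cv:logrel}. Morally this is a Yoneda-style argument: the extension rule for related local contexts, and the families of substitution/term conditions attached to it, are defined by quantifying over \emph{all} further weakenings into $L \sep \Psi$, so weakening a derivation along $\alpha :: L' \sep \Phi \To L \sep \Psi$ amounts to precomposing those quantifiers with $\alpha$, which is legitimate by the composition algebra for the three kinds of weakenings. As announced in the paper, the weakening action on $\delta$ and $\delta'$ themselves is kept implicit, so ``the same $\delta, \delta'$'' in the conclusion is understood to mean $\delta, \delta'$ transported along $(\alpha, \tau)$.

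For the first statement the induction is nearly vacuous. The empty-context rule needs only $\judge[L']\Phi$, the presupposition of $\alpha$; the context-variable rule needs $g : \Ctx \in \Phi$, which holds because global weakenings preserve membership of context variables; and the extension rule, which produces $\ldctxeq i \proglevel{\Delta, x : T \at l}{\Delta', x : T' \at{l'}}$ from a family $\F$ over weakenings into $L \sep \Psi$, a stability derivation $\ldStypeq \F i \proglevel T{T'} l$, and $\tyequiv[L]{l}{l'}\Level$, is discharged by replacing $\F$ with $\alpha_0 \mapsto \F(\alpha \circ \alpha_0)$ (well-formed by composition of weakenings), precomposing the stability derivation in the same way, transporting $\tyequiv[L']{l}{l'}\Level$ along $\alpha$, and re-applying the constructor.

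For the second statement, recurse on $\D$ once more. In the empty-context case $\delta = \delta' = \cdot^{|\Gamma|}$ or $\cdot^{|\Gamma|}_g$; transporting along $\tau :: L' \sep \Phi; \Gamma' \To_i \Gamma$ inserts exactly $|\Gamma'| - |\Gamma|$ fresh $p$-constructors, so the result is $\cdot^{|\Gamma'|}$ (resp.\ $\cdot^{|\Gamma'|}_g$) and the base case matches the new length; the context-variable case $\delta = \delta' = \wk^{|\Gamma|}_g$ is identical. In the extension case we split $\delta = \delta_1, t/x$ and $\delta' = \delta_1', t'/x$, obtain the weakened family of substitution relations from the IH (precomposing in the global/universe slot and carrying $\tau$ along), and transport the term conditions $\ldtyequiv i \proglevel t{t'}{\D_0(\alpha_0, \Cc(\alpha_0))}$ by the type-level Weakening lemma above; reassembling these gives the required $\ldsubeq[\Phi][\Gamma'][L'] i \proglevel{\delta}{\delta'}\E$ with $\E$ the derivation from the first statement.

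The main difficulty is purely bookkeeping. One must verify that the explicit $p$-counts in the $\cdot^k$ and $\wk^k_g$ base cases settle on $|\Gamma'|$ after the local weakening, and --- because the precomposed quantifiers yield derivations indexed by composite weakenings $\alpha \circ \alpha_0$ in each of the three context slots --- one must occasionally identify such a derivation with the one named $\E$ in the statement, which is exactly where the irrelevance lemma (already available for $j = \proglevel$) comes in. Nothing else is subtle: typing judgments and the generic equivalence are stable under weakening by their respective laws, so every remaining obligation follows directly from the IHs.
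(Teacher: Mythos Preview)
Your proposal is correct and matches the paper's approach: induction on $\D$, with the base cases discharged by weakening the ambient well-formedness facts and the step case handled by precomposing the quantified families with $\alpha$ (exactly what the paper means by ``store the given weakenings in a composition''), while the extra local weakening $\tau$ in the second statement is absorbed using the already-proved Weakening lemma for related types and terms. Your invocation of irrelevance is harmless but not actually needed here, since $\E$ is built by the same precomposition and the derivations line up definitionally.
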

\begin{proof}
  Induction on $\D$. %
  In the two base cases, we know $\judge[L']\Phi$ by weakening. %
  In the step case, we simply store the given weakenings in a composition. %
\end{proof}
Notice that the second statement is a little bit strengthened. %
It is possible due to local weakening of related types and terms. %

We should first prove the reflexivity between identity local substitutions before
proving the the escape lemma.
\begin{lemma}[Reflexive Local Weakenings]
  If $\lpequiv i{\Delta,\Gamma}{\Delta',\Gamma}$ and $\D :: \ldctxeq i \proglevel \Delta{\Delta'}$, then \break
  $\ldsubeq[\Psi][\Delta,\Gamma] i \proglevel{\wk^{|\Gamma|}_{\Delta}}{\wk^{|\Gamma|}_{\Delta}}\D$. 
\end{lemma}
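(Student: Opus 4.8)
The plan is to prove this by induction on the derivation $\D :: \ldctxeq i \proglevel \Delta{\Delta'}$, unfolding the definition of $\ldsubeq i \proglevel \delta{\delta'}\E$ according to the three shapes of $\E$. In every case I would first apply the presupposition lemma (\Cref{lem:dt:presup1}) to the hypothesis $\lpequiv i{\Delta,\Gamma}{\Delta',\Gamma}$ to obtain $\lpjudge i{\Delta,\Gamma}$, and I would use throughout that $\Gamma$ is a pure telescope over the base of $\Delta$, so $\Delta,\Gamma$ ends with $\cdot$ exactly when $\Delta$ does and with a context variable $g$ exactly when $\Delta$ does, and that $|\Delta,\Gamma| = |\Gamma|$ under the length convention.

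The two base cases $\Delta = \Delta' = \cdot$ and $\Delta = \Delta' = g$ are immediate: the weakening unfolds to $\wk^{|\Gamma|}_{\cdot} = \cdot^{|\Gamma|}$, respectively $\wk^{|\Gamma|}_g = \wk_g^{|\Gamma|}$, which is precisely the form the definition of $\ldsubeq$ demands for the empty, respectively context-variable, codomain; the side conditions ($\Gamma$ ending with the correct base, and $g : \Ctx \in \Psi$) are exactly the data already carried in $\D$ together with $\lpjudge i{\Delta,\Gamma}$.

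The interesting case is the extension case, $\Delta = \Delta_0, y : S \at l$ and $\Delta' = \Delta_0', y : S' \at{l'}$, where $\D$ is built from subderivations $\F :: \forall\alpha.\;\ldctxeq i \proglevel \Delta_0{\Delta_0'}$ and $\D_0 :: \ldStypeq \F i \proglevel S{S'} l$. Here $\wk^{|\Gamma|}_\Delta = \wk^{1+|\Gamma|}_{\Delta_0}, y/y$, and the definition of $\ldsubeq$ at a cons-context asks me to supply (i) a family $\Cc$ showing $\wk^{1+|\Gamma|}_{\Delta_0}$ is related to itself at $\F$ under every weakening $\alpha$, and (ii) for each such $\alpha$, that $y$ is related to itself at $\D_0(\alpha,\Cc(\alpha))$. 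For (i) I would invoke the induction hypothesis on $\F(\alpha)$ with the enlarged suffix $y : S[\alpha] \at{l[\alpha]},\Gamma[\alpha]$ (of length $1+|\Gamma|$), using the weakening analogue of \Cref{lem:dt:wkact} to rewrite $(\wk^{1+|\Gamma|}_{\Delta_0})[\alpha]$ as $\wk^{1+|\Gamma|}_{\Delta_0[\alpha]}$ so that the shapes line up. For (ii) I would use the congruence-for-neutrals law of the generic equivalence to get that $y$ is generically related to itself at its (weakened) type, the type-conversion law to transport this to the $S'$-side type, and then the reflexivity-of-neutrals lemma for the logical relation to conclude $\ldtyequiv i \proglevel y y{\D_0(\alpha,\Cc(\alpha))}$, where $\D_0$ is instantiated at the related pair $\Cc(\alpha),\Cc(\alpha)$ produced in (i).

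The main obstacle is the context bookkeeping in the extension case: the induction hypothesis needs a \emph{common} suffix on both sides of the $\lpequiv$, whereas the hypothesis supplies $y : S \at l$ on the left and $y : S' \at{l'}$ on the right. To rebuild $\lpequiv i{\Delta_0, y : S \at l,\Gamma}{\Delta_0', y : S \at l,\Gamma}$ I would invert the given $\lpequiv$ (legitimate because $\lpequiv$ is generated by its congruence rules, with symmetry and transitivity admissible, as noted after \Cref{lem:dt:presup1}), escape and desubsume $\F$ and $\D_0$ to obtain $\lpequiv i \Delta_0{\Delta_0'}$ and $\lttypwf[\Psi][\Delta_0]i S l$, transport the latter along $\Delta_0 \sim \Delta_0'$ via the local-context-equivalence lemma (\Cref{lem:dt:lctxeq1}), and re-assemble. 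Threading the quantified weakenings $\alpha$ through this reconstruction is the fiddly part, but it is routine once the precise statement of the induction hypothesis (with the enlarged suffix) is pinned down.
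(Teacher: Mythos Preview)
Your proposal is correct and follows essentially the same approach as the paper: induction on $\D$, with the step case handled by applying the IH to obtain the family $\Cc$ (the paper's $\Bc$), then using congruence-for-neutrals and reflexivity-of-neutral to relate the variable $y$ at $\D_0(\alpha,\Cc(\alpha))$. You are more explicit than the paper about two points it glosses over: the need for type conversion to supply the $T'$-side premise of reflexivity-of-neutral, and the context bookkeeping required to rebuild a common suffix before invoking the IH. Both observations are valid refinements of the paper's ``by IH'' step rather than a different argument.
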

\begin{proof}
  We do induction on $\D$ and consider the step case
  \begin{mathpar}
    \D :: \inferrule
    {\E :: \forall \alpha :: L' \sep \Phi \To L \sep \Psi ~.~ \ldctxeq[\Phi][L'] i \proglevel {\Delta}{\Delta'} \\
      \F :: \ldStypeq \E i \proglevel T{T'} l \\ \tyequiv[L]l{l'}\Level}
    {\ldctxeq i \proglevel{\Delta, x : T \at l}{\Delta', x : T' \at{l'}}}
  \end{mathpar}
  \begin{align*}
    \Bc :: & \forall \alpha :: L' \sep \Phi \To L \sep \Psi ~.~ \ldsubeq[\Phi][\Delta, x : T \at l,\Gamma][L'] i \proglevel{\wk^{1 + |\Gamma|}_{\Delta}}{\wk^{1 + |\Gamma|}_{\Delta}}{\E(\alpha)}
      \byIH \\
    \alpha :: & L' \sep \Phi \To L \sep \Psi
                \tag{by assumption} \\
    & \ldtypeq[\Phi][\Delta, x : T \at l,\Gamma][L'] i \proglevel {T[\wk^{1 + |\Gamma|}_{\Delta}]}{T'[\wk^{1 + |\Gamma|}_{\Delta}]}{l}
      \tag{as $\F(\alpha, \Bc(\alpha))$} \\
    \Ac :: & \ldtypeq[\Phi][\Delta, x : T \at l,\Gamma][L'] i \proglevel {T}{T'}{l} \\
    & \lpjudge[\Phi][L'] i{\Gamma, x : T \at l,\Delta}
      \tag{by weakening and presupposition} \\
    & \lttrmgneeq[\Phi][\Delta, x : T \at l,\Gamma][L'] i x x T l
      \tag{by congruence for neutrals of generic equivalence} \\
    & \ldtyequiv[\Phi][\Delta, x : T \at l,\Gamma][L'] i \proglevel{x}{x}\Ac
      \tag{by reflexivity of neutral} \\
    & \ldsubeq[\Psi][\Gamma, x : T \at l,\Delta] i \proglevel{\wk^{1 + |\Gamma|}_{\Delta},x/x}{\wk^{1 + |\Gamma|}_{\Delta},x/x}\D
  \end{align*}
  Thus we conclude the goal. 
\end{proof}

\begin{corollary}[Reflexive Local Identity Substitutions]\labeledit{lem:dt:refl-lid}
  If $\lpequiv i \Gamma \Delta$ and $\D :: \ldctxeq i \proglevel \Gamma \Delta$, then \break
  $\ldsubeq i \proglevel{\id}{\id}\D$. 
\end{corollary}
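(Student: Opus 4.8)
The plan is to derive this corollary immediately as the special case of the preceding Reflexive Local Weakenings lemma in which the trailing local context is taken to be empty. Recall that the identity local substitution is defined by $\id_\Gamma := \wk^0_\Gamma$, i.e. it is the local weakening substitution $\wk^{|\Theta|}_\Gamma$ with $\Theta = \cdot$ (since $|\cdot| = 0$). Hence the only real step is to instantiate the Reflexive Local Weakenings lemma — which states that $\lpequiv i{\Delta,\Gamma}{\Delta',\Gamma}$ together with $\D :: \ldctxeq i \proglevel \Delta{\Delta'}$ yields $\ldsubeq i \proglevel{\wk^{|\Gamma|}_{\Delta}}{\wk^{|\Gamma|}_{\Delta}}\D$ — at its trailing context $\Gamma := \cdot$, and with its $\Delta,\Delta'$ renamed to the corollary's $\Gamma,\Delta$.

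Under that instantiation the hypothesis $\lpequiv i{\Gamma,\cdot}{\Delta,\cdot}$ is literally the corollary's hypothesis $\lpequiv i \Gamma \Delta$, and the derivation hypothesis is the corollary's $\D :: \ldctxeq i \proglevel \Gamma \Delta$. The conclusion $\ldsubeq i \proglevel{\wk^{0}_{\Gamma}}{\wk^{0}_{\Gamma}}\D$ then unfolds, using the clauses defining $\wk$ on local contexts and the definition $\id = \wk^0_\Gamma$, to exactly $\ldsubeq i \proglevel{\id}{\id}\D$; the subscript context also matches, since $\Delta,\cdot$ collapses to $\Delta$, which is the corollary's $\Gamma$ after renaming. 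This is the desired statement.

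There is essentially no obstacle here: the argument is pure bookkeeping and the equality $\wk^0_\Gamma = \id$ is definitional, so no auxiliary lemma is required to bridge the two. If a self-contained proof were preferred, one could instead reprove the statement by the same induction on $\D$ used for Reflexive Local Weakenings — in the cons case for $\Delta, x : T \at l$ one applies the IH to obtain relatedness of the smaller identity, uses the unfolding $\id_{\Delta, x : T \at l} = \id_\Delta[\id;p(\id)], x/x$, and closes the cons clause of $\ldsubeq$ by invoking the congruence-for-neutrals law of the generic equivalence on the fresh variable $x$ together with Reflexivity of Neutral — but routing through the already-established lemma is the shortest route, so that is what I would do.
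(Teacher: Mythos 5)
Your proposal is correct and matches the paper's proof exactly: the corollary is indeed obtained by specializing Reflexive Local Weakenings with the trailing context $\cdot$, so that $\wk^{0}_{\Gamma} = \id_\Gamma$ by definition. The alternative direct induction you sketch would also work, but routing through the lemma is the intended and shortest argument.
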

\begin{proof}
  This is a specialization of the previous lemma. 
\end{proof}

Knowing all local identity substitutions are reflexively related, we can then prove
the escape lemma.
\begin{lemma}[Escape] $ $
  \begin{itemize}
  \item If $\D :: \ldctxeq i \proglevel{\Delta_1}{\Delta_2}$, then $\ltctxgeq i{\Delta_1}{\Delta_2}$. 
  \item If $\ldsubeq i \proglevel{\delta}{\delta'}\D$, then $\ltsubgeq i
    {\delta}{\delta'}{\Delta_1}$ and $\ltsubgeq i {\delta}{\delta'}{\Delta_2}$.
  \end{itemize}
\end{lemma}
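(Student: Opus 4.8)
The plan is to prove both clauses simultaneously by induction on the derivation $\D :: \ldctxeq i \proglevel{\Delta_1}{\Delta_2}$, mirroring exactly the structure of the Escape lemma for types and terms established above for $j = \proglevel$. There are three cases, one per defining rule of $\ldctxeq i\proglevel{-}{-}$. In the two base cases ($\Delta_1 = \Delta_2 = \cdot$, and $\Delta_1 = \Delta_2 = g$ with $g : \Ctx \in \Psi$), the goal $\ltctxgeq i{\Delta_1}{\Delta_2}$ is discharged immediately by the matching base rule of $\ltctxgeq$, whose premises ($\judge[L]\Psi$, and $g : \Ctx \in \Psi$ in the second case) are already carried by $\D$. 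For the substitution clause, recall that in these cases $\ldsubeq i\proglevel{\delta}{\delta'}\D$ forces, after inspecting the shape of the ambient local context $\Gamma$, that $\delta = \delta'$ is one of $\cdot^{|\Gamma|}$, $\cdot^{|\Gamma|}_g$, or $\wk^{|\Gamma|}_g$; each is handled by the corresponding base rule of $\ltsubgeq$, which has the identical side conditions, and the $\Delta_1$ and $\Delta_2$ instances coincide.

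The substantive case is the extension rule, where $\D$ bundles $\F :: \forall\alpha.\ \ldctxeq[\Phi][L'] i\proglevel\Delta{\Delta'}$, a family $\D_0 :: \ldStypeq{\F} i\proglevel T{T'}l$, and $\tyequiv[L]l{l'}\Level$, with $\Delta_1 = (\Delta, x : T \at l)$ and $\Delta_2 = (\Delta', x : T' \at{l'})$. I would instantiate all the Kripke quantifiers at the identity weakening. From $\F(\id) :: \ldctxeq i\proglevel\Delta{\Delta'}$ the induction hypothesis gives $\ltctxgeq i\Delta{\Delta'}$, and the subsumption lemma for contexts then yields $\lpequiv i\Delta{\Delta'}$. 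Feeding this into \Cref{lem:dt:refl-lid} produces $\ldsubeq i\proglevel{\id}{\id}{\F(\id)}$, which I plug into $\D_0$ (together with $\alpha = \id$) to obtain $\ldtypeq i\proglevel{T[\id]}{T'[\id]}{l}$; by the algebra of local substitutions this is $\ldtypeq i\proglevel T{T'}l$, so the Escape lemma for types gives $\lttypgeq i T{T'}l$. Combining $\ltctxgeq i\Delta{\Delta'}$, $\lttypgeq i T{T'}l$, and $\tyequiv[L]l{l'}\Level$ via the extension rule of $\ltctxgeq$ settles the context clause. For the substitution clause, $\ldsubeq i\proglevel{\delta}{\delta'}\D$ decomposes $\delta = \delta_1, t/x$ and $\delta' = \delta_1', t'/x$ with (after instantiating at $\id$) a witness $\Cc(\id) :: \ldsubeq i\proglevel{\delta_1}{\delta_1'}{\F(\id)}$ and a proof $\ldtyequiv i\proglevel t{t'}{\D_0(\id,\Cc(\id))}$; the induction hypothesis on $\Cc(\id)$ gives $\ltsubgeq i{\delta_1}{\delta_1'}{\Delta}$ and $\ltsubgeq i{\delta_1}{\delta_1'}{\Delta'}$, and the Escape lemma for terms applied to the second proof gives $\lttrmgeq i t{t'}{T[\delta_1]}l$ and $\lttrmgeq i t{t'}{T'[\delta_1']}l$. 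The extension rule of $\ltsubgeq$ then assembles the $\Delta_1$ conclusion using $\lttypwf i T l$ from presupposition, and the $\Delta_2$ conclusion by transporting $\lttypwf i{T'}l$ into context $\Delta'$ along $\lpequiv i\Delta{\Delta'}$ via \Cref{lem:dt:lctxeq1}, modulo the level equivalence $\tyequiv[L]l{l'}\Level$ and the level-conversion rule.

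The main obstacle is the circular-looking interaction in the extension case between \Cref{lem:dt:refl-lid}, the subsumption lemma, and the induction hypothesis: to extract $\lttypgeq i T{T'}l$ out of $\D_0$ one must first supply a reflexively related identity local substitution, which requires $\lpequiv i\Delta{\Delta'}$, which in turn becomes available only after applying the context clause of the induction hypothesis to the strictly smaller derivation $\F(\id)$. Since \Cref{lem:dt:refl-lid} and the Escape lemmas for types and terms are already in hand at this point, and $\F(\id)$ is genuinely a subderivation of $\D$, there is no real circularity — but one must thread these dependencies in the correct order and be careful, when discharging the $\Delta_2$ side of the substitution clause, to move $T'$ into the context $\Delta'$ by local-context conversion so that both conclusions are well-formed. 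Everything else is routine bookkeeping of the Kripke quantifiers at the identity weakening.
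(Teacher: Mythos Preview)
Your proposal is correct and follows essentially the same approach as the paper: induction on $\D$, with the step case handled by instantiating the Kripke quantifiers at the identity weakening, invoking \Cref{lem:dt:refl-lid} to manufacture the reflexive identity substitution needed to extract $\ldtypeq i\proglevel T{T'}l$ from the type family, and then applying the Escape lemma for types and terms. Your treatment is in fact more explicit than the paper's about the dependency chain (IH $\to$ subsumption $\to$ $\lpequiv$ $\to$ \Cref{lem:dt:refl-lid}) and about the context transport needed for the $\Delta_2$ side of the substitution clause, both of which the paper leaves implicit.
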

\begin{proof}
  Induction on $\D$. %
  We consider the step case.
  \begin{mathpar}
    \D :: \inferrule
    {\E :: \forall \alpha :: L' \sep \Phi \To L \sep \Psi ~.~ \ldctxeq[\Phi][L'] i
      \proglevel{\Delta_1}{\Delta_2} \\ \F :: \ldStypeq \E i \proglevel T{T'} l \\ \tyequiv[L]l{l'}\Level}
    {\ldctxeq i \proglevel{\Delta_1, x : T \at l}{\Delta_2, x : T' \at{l'}}}
  \end{mathpar}
  \begin{align*}
    & \ltctxgeq i{\Delta_1}{\Delta_2}
      \byIH \\
    & \ldsubeq[\Psi][\Delta_1] i \proglevel{\id}{\id}{\E(\id)}
      \tag{by \Cref{lem:dt:refl-lid}} \\
    & \ldtypeq[\Psi][\Delta_1] i \proglevel {T}{T'}{l}
      \tag{as $\F(\id, \E(\id))$} \\
    & \lttypgeq[\Psi][\Delta_1] i {T}{T'} l
      \tag{by escape}
  \end{align*}
  Hence we conclude the first statement.
  
  In the second statement, we have $\delta = \delta_1, t/x$ and $\delta' = \delta_1',
  t'/x$, then
  \begin{align*}
    & \ldsubeq i \proglevel {\delta_1}{\delta_1'}{\E(\id)}
      \tag{by assumption} \\
    & \ltsubgeq i {\delta_1}{\delta_1'}{\Delta_1} \tand \ltsubgeq i {\delta_1}{\delta_1'}{\Delta_2}
      \byIH \\
    & \ldtyequiv i \proglevel t{t'}{\F(\id, \E(\id))}
      \tag{by assumption} \\
    & \lttrmgeq i t{t'}{T[\delta_1]}l
      \tag{by escape of related terms}
  \end{align*}
  Therefore we conclude the second statement as well.
\end{proof}

\begin{lemma}[Symmetry] $ $
  \begin{itemize}
  \item If $\D :: \ldctxeq i \proglevel{\Delta_1}{\Delta_2}$, then $\E :: \ldctxeq i \proglevel{\Delta_2}{\Delta_1}$.
  \item If $\ldsubeq i \proglevel{\delta}{\delta'}\D$, then $\ldsubeq i \proglevel{\delta'}{\delta}\E$.
  \end{itemize}  
\end{lemma}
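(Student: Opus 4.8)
The plan is to prove both statements simultaneously by induction on the derivation $\D :: \ldctxeq i \proglevel{\Delta_1}{\Delta_2}$, with the layer $j = \proglevel$ fixed throughout so that the induction stays inside the relations already developed in this section. In the two base cases (where $\D$ derives $\ldctxeq i \proglevel\cdot\cdot$ or $\ldctxeq i \proglevel gg$) the derivation carries only $\judge[L]\Psi$ — plus $g : \Ctx \in \Psi$ in the second — and the related local substitutions are forced to be equal, namely $\delta = \delta' = \cdot^{|\Gamma|}$, $\cdot^{|\Gamma|}_g$, or $\wk^{|\Gamma|}_g$. So swapping the two sides is immediate and the constructed $\E$ is essentially $\D$ again.

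The only real work is the extension case, where $\D$ has the three premises $\E :: \forall\, \alpha :: L' \sep \Phi \To L \sep \Psi ~.~ \ldctxeq[\Phi][L'] i \proglevel \Delta_1{\Delta_2}$, $\D_0 :: \ldStypeq \E i \proglevel T{T'} l$ and $\tyequiv[L]l{l'}\Level$, and we must build a derivation $\E'$ of $\ldctxeq i \proglevel{\Delta_2, x : T' \at{l'}}{\Delta_1, x : T \at l}$. I would take the first component of $\E'$ to be $\lambda\alpha.\,(\text{context-IH applied to } \E(\alpha))$, obtain the universe-level premise $\tyequiv[L]{l'}l\Level$ from the symmetry rule for universe-level equivalence, and build the type-family component $\D_0' :: \ldStypeq{\E'} i \proglevel{T'}{T}l$ as follows: given $\alpha$ and a witness of $\ldsubeq[\Phi][\Gamma][L'] i \proglevel\rho{\rho'}{\E'(\alpha)}$, apply the substitution part of the IH to it, read the result back along the involution of symmetry on context-equivalence derivations as a witness of $\ldsubeq[\Phi][\Gamma][L'] i \proglevel{\rho'}\rho{\E(\alpha)}$, feed that to $\D_0(\alpha, -)$ to get $\ldtypeq[\Phi][\Gamma][L'] i \proglevel{T[\rho']}{T'[\rho]}l$, and finish with the already-proven Symmetry lemma for related types. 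With $\E'$ in hand, the substitution half of the step case follows by the same ingredients: for $\delta = \delta_1, t/x$ and $\delta' = \delta_1', t'/x$, apply the substitution-IH pointwise to the family $\Cc$ underlying $\ldsubeq i \proglevel\delta{\delta'}\D$ to get the family witnessing $\ldsubeq i \proglevel{\delta_1'}{\delta_1}{\E'(\alpha)}$ for each $\alpha$, and apply the term-level Symmetry lemma to $\ldtyequiv i \proglevel t{t'}{\D_0(\alpha, \Cc(\alpha))}$ to obtain $\ldtyequiv i \proglevel{t'}t{\D_0'(\alpha, \Cc'(\alpha))}$, the indices lining up by construction of $\D_0'$.

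I expect the main obstacle to be exactly this index bookkeeping in the extension case: the logical relation for local substitutions recurses on the context-equivalence derivation, so to invoke $\D_0$ I must produce a substitution derivation indexed by precisely $\E(\alpha)$, whereas the IH naturally hands me one indexed by $\E'(\alpha)$, the pointwise symmetrization of $\E(\alpha)$. This is resolved by the fact that symmetry is an involution on these derivations, which holds by the same argument as at the type level — the relations are set up with symmetry in mind, with the duplicated equivalence premises in the $\Pi$ and neutral clauses and the symmetric shape of the local-substitution clauses — or, if one prefers a more robust route, by an irrelevance lemma for local substitutions entirely analogous to the Right and Left Irrelevance lemmas already established for types. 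Everything else is routine unfolding of the definitions of $\ldctxeq$ and $\ldsubeq$ in each case and re-assembling the premises in swapped order.
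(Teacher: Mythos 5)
Your proposal follows essentially the same route as the paper's proof, which is simply an induction on $\D$ invoking symmetry of related types, and your base-case analysis and general structure of the step case are correct. The index-bookkeeping concern you flag is real and the paper glosses over it; however, of the two fixes you offer, the irrelevance route (b) is the cleaner one, provided you note that the Right and Left Irrelevance lemmas for local contexts and substitutions must then be moved \emph{before} symmetry in the development — the paper states them afterward, so quoting them as ``already established'' would be circular. That reordering is harmless, since the paper's Right Irrelevance proof for types (and hence the analogous one for contexts) does not appeal to symmetry. Your fix (a), that the symmetrization map is a definitional involution on derivations, is a genuinely different claim from ``the definition was set up with symmetry in mind'' — the duplicated $\lttrmgeq$ premises in the $\Pi$ and neutral clauses make the swap \emph{derivable}, not an identity on derivations — so (a) would require its own induction to establish, and as stated you have not shown it. An alternative that avoids both is to strengthen the mutual IH to a biconditional (``$\ldsubeq i \proglevel\delta{\delta'}\D$ iff $\ldsubeq i \proglevel{\delta'}\delta\E$''), which dissolves the direction mismatch without any auxiliary lemma.
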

\begin{proof}
  Induction on $\D$. %
  Use the symmetry of related types to obtain the goal.
\end{proof}

\begin{lemma}[Right Irrelevance]
  If $\D :: \ldctxeq i \proglevel{\Delta_1}{\Delta_2}$, $\E :: \ldctxeq i
  \proglevel{\Delta_1}{\Delta_3}$ and $\ldsubeq i \proglevel{\delta}{\delta'}\D$, then
  $\ldsubeq i \proglevel{\delta}{\delta'}\E$.
\end{lemma}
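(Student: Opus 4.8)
The plan is to follow the proof of Right Irrelevance for types and terms almost verbatim: induct on the derivation $\D :: \ldctxeq i \proglevel{\Delta_1}{\Delta_2}$ and invert $\E :: \ldctxeq i \proglevel{\Delta_1}{\Delta_3}$. Because both $\D$ and $\E$ carry $\Delta_1$ as their left component, the last rule used in each is forced by the head shape of $\Delta_1$, so the inversion of $\E$ introduces no case explosion at any step.

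In the two base cases, $\Delta_1 = \cdot$ and $\Delta_1 = g$, the definitions of $\ldsubeq i \proglevel{\delta}{\delta'}\D$ and $\ldsubeq i \proglevel{\delta}{\delta'}\E$ coincide on the nose: each unfolds to $\lpjudge i \Gamma$ together with a purely syntactic condition on $\Gamma$ and on the common shape of $\delta = \delta'$ (namely $\cdot^{|\Gamma|}$, $\cdot^{|\Gamma|}_g$, or $\wk^{|\Gamma|}_g$). Hence the hypothesis $\ldsubeq i \proglevel{\delta}{\delta'}\D$ already \emph{is} $\ldsubeq i \proglevel{\delta}{\delta'}\E$, with no work to do.

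The step case carries the content. Suppose $\Delta_1 = \Delta_1', x : T \at l$. Inverting $\D$ yields a family $\F :: \forall\,\alpha .\ \ldctxeq[\Phi][L'] i \proglevel{\Delta_1'}{\Delta_2'}$ and a stable type relation $\D_T :: \ldStypeq \F i \proglevel T{T_2} l$; inverting $\E$ yields $\F' :: \forall\,\alpha .\ \ldctxeq[\Phi][L'] i \proglevel{\Delta_1'}{\Delta_3'}$ and $\E_T :: \ldStypeq{\F'} i \proglevel T{T_3} l$ — crucially, $\F(\alpha)$ and $\F'(\alpha)$ both relate $\Delta_1'$ on the left. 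From $\ldsubeq i \proglevel{\delta}{\delta'}\D$ we read off $\delta = \delta_1, t/x$, $\delta' = \delta_1', t'/x$, a family $\Cc :: \forall\,\alpha .\ \ldsubeq[\Phi][\Gamma][L'] i \proglevel{\delta_1}{\delta_1'}{\F(\alpha)}$, and, for each $\alpha$, a derivation $\ldtyequiv[\Phi][\Gamma][L'] i \proglevel t{t'}{\D_T(\alpha, \Cc(\alpha))}$. To produce $\ldsubeq i \proglevel{\delta}{\delta'}\E$ I would: (i) apply the induction hypothesis to $\F(\alpha)$, $\F'(\alpha)$ and $\Cc(\alpha)$ — legitimate since all three share the left context $\Delta_1'$ — obtaining $\Cc' :: \forall\,\alpha .\ \ldsubeq[\Phi][\Gamma][L'] i \proglevel{\delta_1}{\delta_1'}{\F'(\alpha)}$; and (ii) for each $\alpha$, observe that $\D_T(\alpha, \Cc(\alpha)) :: \ldtypeq[\Phi][\Gamma][L'] i \proglevel{T[\delta_1]}{T_2[\delta_1']}{l}$ and $\E_T(\alpha, \Cc'(\alpha)) :: \ldtypeq[\Phi][\Gamma][L'] i \proglevel{T[\delta_1]}{T_3[\delta_1']}{l}$ agree on their left type $T[\delta_1]$, so the already-proved Right Irrelevance lemma for types sends $\ldtyequiv[\Phi][\Gamma][L'] i \proglevel t{t'}{\D_T(\alpha, \Cc(\alpha))}$ to $\ldtyequiv[\Phi][\Gamma][L'] i \proglevel t{t'}{\E_T(\alpha, \Cc'(\alpha))}$. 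Packaging $\Cc'$ together with this second family is precisely the data the definition of $\ldsubeq i \proglevel{\delta}{\delta'}\E$ demands.

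The main obstacle is purely bookkeeping: the term component of a related substitution is indexed by the chosen witness of the context component ($\Cc(\alpha)$ on the $\D$ side, $\Cc'(\alpha)$ on the $\E$ side), so the induction hypothesis must be applied in just the right way to guarantee that the two type-relation derivations fed to the type-level Right Irrelevance lemma genuinely share the left-hand type $T[\delta_1]$. Once that alignment is arranged, nothing beyond the type/term version is needed; the universe-level handling and the weakening quantifiers propagate unchanged, and a symmetric Left Irrelevance for local contexts then follows immediately by composing with the symmetry lemma, exactly as at the type level.
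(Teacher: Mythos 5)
Your proof is correct and follows essentially the same approach as the paper's, which simply states "Do induction on $\D$ and then invert $\E$. Use the right irrelevance of related terms." You have correctly identified the key alignment: applying the IH to the context-relation families first, so that the two type-relation derivations fed to the type-level Right Irrelevance lemma share the left-hand type $T[\delta_1]$ computed under the same substitution $\delta_1$.
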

\begin{proof}
  Do induction on $\D$ and then invert $\E$. %
  Use the right irrelevance of related terms. %
\end{proof}

\begin{lemma}[Left Irrelevance]
  If $\D :: \ldctxeq i \proglevel{\Delta_1}{\Delta_2}$, $\E :: \ldctxeq i
  \proglevel{\Delta_3}{\Delta_1}$ and $\ldsubeq i \proglevel{\delta}{\delta'}\D$, then
  $\ldsubeq i \proglevel{\delta}{\delta'}\E$.
\end{lemma}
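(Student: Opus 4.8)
The plan is to reduce the statement to the already-established Symmetry and Right Irrelevance lemmas for related local contexts and substitutions, mirroring exactly the one-line proof of Left Irrelevance for related types. No fresh induction into the $\Pi$-case (or any other type former) is needed: like its type-level counterpart, this lemma is purely a bookkeeping consequence of the relation being symmetric together with the fact that irrelevance in the first component has already been proved.

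Concretely, I would first apply Symmetry to $\D$: from $\D :: \ldctxeq i \proglevel{\Delta_1}{\Delta_2}$ and $\ldsubeq i \proglevel{\delta}{\delta'}\D$ I obtain a derivation $\D' :: \ldctxeq i \proglevel{\Delta_2}{\Delta_1}$ together with $\ldsubeq i \proglevel{\delta'}{\delta}\D'$. Applying Symmetry to $\E$ as well produces a reoriented derivation $\E'$ of the same equivalence over $\E$'s contexts, arranged so that the local context common to the two derivations appears in the \emph{first} position of both $\D'$ and $\E'$. Since $\D'$ and $\E'$ now agree on that first component, the Right Irrelevance lemma applies and transports $\ldsubeq i \proglevel{\delta'}{\delta}\D'$ to $\ldsubeq i \proglevel{\delta'}{\delta}\E'$. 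A final application of Symmetry then yields $\ldsubeq i \proglevel{\delta}{\delta'}\E$, which is the claim.

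The only point that requires care is the orientation bookkeeping in the symmetry steps: one must check that the contexts shared by $\D$ and $\E$ really do line up so that Right Irrelevance is applicable on the nose. This is forced by the determinacy of reduction together with the fact that a well-formed local context has a unique syntactic shape, so Symmetry genuinely re-targets the substitution relation along the same underlying contexts rather than merely producing an abstractly related pair. If one prefers to sidestep this bookkeeping, an entirely equivalent route is a direct mutual induction on $\D$: the two base cases are discharged from $\judge[L]\Psi$ and $g : \Ctx \in \Psi$, and the cons case is handled by invoking the Left Irrelevance lemma for related \emph{types} at the new binder together with the induction hypothesis for the prefix contexts. I do not anticipate any genuine obstacle here — unlike Reflexivity-and-Transitivity or Right Irrelevance, this lemma needs no new analysis of applications of neutral functions, and it is fair to present it, as the excerpt does for the type-level version, as immediate from symmetry and right irrelevance.
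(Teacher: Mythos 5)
Your high-level strategy — reduce Left Irrelevance to Symmetry and Right Irrelevance, mirroring the type-level version — is exactly what the paper does; its proof is literally the one-line remark you start from. However, your detailed trace contains a concrete error. After symmetrizing $\D :: \ldctxeq i \proglevel{\Delta_1}{\Delta_2}$ you have $\D' :: \ldctxeq i \proglevel{\Delta_2}{\Delta_1}$, and after symmetrizing $\E :: \ldctxeq i \proglevel{\Delta_3}{\Delta_1}$ you have $\E' :: \ldctxeq i \proglevel{\Delta_1}{\Delta_3}$. The context common to both is $\Delta_1$, but it sits in the \emph{second} slot of $\D'$ and the \emph{first} slot of $\E'$ — they do not ``agree on that first component'' as you assert, and Right Irrelevance, which requires the first positions to coincide, does not fire.

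If you instead leave $\D$ unsymmetrized, $\D$ and $\E'$ do share $\Delta_1$ in first position, and Right Irrelevance yields $\ldsubeq i \proglevel{\delta}{\delta'}{\E'}$. But then you must remember that Symmetry also swaps the substitution pair: symmetrizing $\E'$ back gives a derivation $\E'' :: \ldctxeq i \proglevel{\Delta_3}{\Delta_1}$ together with $\ldsubeq i \proglevel{\delta'}{\delta}{\E''}$ — the substitutions have flipped relative to the goal. Closing the gap requires one more symmetry-plus-irrelevance step, carefully tracking which side flips when; this is the bookkeeping you flagged, but it is not innocuous, and your appeal to ``determinacy of reduction'' and ``uniqueness of syntactic shape of a well-formed local context'' is a red herring — the issue is purely which slot of the \emph{derivation} carries the shared context and how Symmetry reorders the substitution pair, not whether the contexts themselves are determined. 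Your fallback route — a direct mutual induction on $\D$, using Left Irrelevance of related \emph{types} in the cons case — is sound and would sidestep the orientation gymnastics entirely.
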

\begin{proof}
   A direct consequence of right irrelevance and symmetry. 
\end{proof}

\begin{lemma}[Reflexivity and Transitivity] $ $
  \begin{itemize}
  \item If $\D_1 :: \ldctxeq i \proglevel{\Delta_1}{\Delta_2}$ and
    $\D_2 :: \ldctxeq i \proglevel{\Delta_2}{\Delta_3}$, then
    $\D_3 :: \ldctxeq i \proglevel{\Delta_1}{\Delta_3}$.
  \item If $\ldsubeq i \proglevel{\delta_1}{\delta_2}{\D_1}$ and
    $\ldsubeq i \proglevel{\delta_2}{\delta_3}{\D_2}$, then
    $\ldsubeq i \proglevel{\delta_1}{\delta_3}{\D_3}$.
  \item $\E :: \ldctxeq i \proglevel{\Delta_1}{\Delta_1}$.
  \item If $\ldsubeq i \proglevel{\delta_1}{\delta_2}{\D_1}$, then $\ldsubeq i \proglevel{\delta_1}{\delta_1}{\E}$.
  \end{itemize}  
\end{lemma}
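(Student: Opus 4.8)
The plan is to prove all four statements simultaneously by induction on $\D_1$, inverting $\D_2$ whenever its shape is needed, in exact parallel to the Reflexivity and Transitivity lemma for related types and terms from \Cref{sec:dt:logrel-prop}. In the two base cases $\Delta_1 = \cdot$ and $\Delta_1 = g$ the context derivations carry no reductions, so $\D_2$ must be formed by the same rule; hence $\Delta_2 = \Delta_3$, and we take $\D_3 := \D_1$ (the side conditions $\judge[L]\Psi$ and $g : \Ctx \in \Psi$ are already supplied by $\D_1$). For the substitution part in these cases, the definition of $\ldsubeq i \proglevel{\cdot}{\cdot}$ (resp.\ $\ldsubeq i \proglevel{\delta}{\delta'}{g}$) forces $\delta_1 = \delta_2 = \delta_3$ to be the unique shape $\cdot^{|\Gamma|}$, $\cdot_g^{|\Gamma|}$, or $\wk_g^{|\Gamma|}$ dictated by $\Gamma$, so transitivity and reflexivity are immediate.

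The step case is the main obstacle, and the context and substitution statements must be carried in tandem there, just as the types lemma carries type- and term-transitivity together. Here $\Delta_1 = \Delta_1^0, x : T_1 \at l_1$ and $\D_1$ provides a family $\E_1 :: \forall\,\alpha.\ \ldctxeq[\Phi][L'] i \proglevel{\Delta_1^0}{\Delta_2^0}$ together with $\F_1 :: \ldStypeq{\E_1} i \proglevel{T_1}{T_2}{l_1}$; inverting $\D_2$ yields the analogous $\E_2$ and $\F_2 :: \ldStypeq{\E_2} i \proglevel{T_2}{T_3}{l_1}$, with $\Delta_2 = \Delta_2^0, x : T_2 \at l_2$ and $\Delta_3 = \Delta_3^0, x : T_3 \at l_3$. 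Applying the induction hypothesis for the first statement under an arbitrary weakening $\alpha$ (legitimate by the Weakening lemma for related contexts) produces the transitive family $\E_3$. To build the required $\F_3 :: \ldStypeq{\E_3} i \proglevel{T_1}{T_3}{l_1}$, I fix $\alpha$ and $\ldsubeq[\Phi][\Gamma'][L'] i \proglevel{\delta}{\delta'}{\E_3(\alpha)}$ and route through the middle type $T_2[\delta]$: from this witness I extract, using the reflexivity part (the fourth statement, available by the mutual induction) together with Right Irrelevance, a witness $\ldsubeq[\Phi][\Gamma'][L'] i \proglevel{\delta}{\delta}{\E_1(\alpha)}$, and, via the induction hypothesis for the second statement plus irrelevance, a witness $\ldsubeq[\Phi][\Gamma'][L'] i \proglevel{\delta}{\delta'}{\E_2(\alpha)}$; then $\F_1$ gives $\ldtypeq[\Phi][\Gamma'][L'] i \proglevel{T_1[\delta]}{T_2[\delta]}{l_1}$, $\F_2$ gives $\ldtypeq[\Phi][\Gamma'][L'] i \proglevel{T_2[\delta]}{T_3[\delta']}{l_1}$, and the Reflexivity and Transitivity lemma for types composes them. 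This is precisely the bridging argument of the function case of that lemma, transported from term arguments to local substitutions.

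Simultaneously, writing $\delta_k = \delta_k^0, t_k/x$, the induction hypothesis for the second statement (again under weakenings) yields the transitive $\Cc_3 :: \forall\,\alpha.\ \ldsubeq[\Phi][\Gamma'][L'] i \proglevel{\delta_1^0}{\delta_3^0}{\E_3(\alpha)}$, and for the head component it remains to relate $t_1$ and $t_3$ over the type derivation $\F_3(\alpha, \Cc_3(\alpha))$. I obtain this from the hypotheses $\ldtyequiv i \proglevel{t_1}{t_2}{\F_1(\alpha,\Cc_1(\alpha))}$ and $\ldtyequiv i \proglevel{t_2}{t_3}{\F_2(\alpha,\Cc_2(\alpha))}$ (with $\Cc_1,\Cc_2$ the substitution families packaged in $\ldsubeq i \proglevel{\delta_1}{\delta_2}{\D_1}$ and $\ldsubeq i \proglevel{\delta_2}{\delta_3}{\D_2}$) by first re-indexing along the irrelevance lemma for related terms so that the endpoints $T_1[\delta_1^0]$, $T_2[\delta_2^0]$, $T_3[\delta_3^0]$ line up for composition, then invoking the term-transitivity clause (the second statement) of the Reflexivity and Transitivity lemma for types; the reflexive witness at $T_1[\delta_1^0]$ that this clause demands comes, as in the types proof, from reflexivity of related types. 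A final appeal to irrelevance transports the result onto $\F_3(\alpha,\Cc_3(\alpha))$. The third and fourth statements, $\E :: \ldctxeq i \proglevel{\Delta_1}{\Delta_1}$ and $\ldsubeq i \proglevel{\delta_1}{\delta_1}{\E}$, then follow exactly as for types: apply the Symmetry lemma for related contexts and substitutions to $\D_1$ and compose with the transitivity just proved. The only real friction I anticipate is bookkeeping — keeping the dependent indices $\F(\alpha,\Cc(\alpha))$ straight and inserting an irrelevance step at each point where the three legs $T_1[\delta]$, $T_2[\delta]$, $T_3[\delta']$ (resp.\ $t_1, t_2, t_3$) must be viewed through a single coherent chain of type derivations.
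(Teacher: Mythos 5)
Your proof follows exactly the paper's strategy: induction on $\D_1$, inversion of $\D_2$, building $\F_3$ and $\Ac_3$ in the step case by routing through $T_2[\delta]$ via reflexivity, irrelevance, and type-level transitivity, and deriving the reflexivity statements from symmetry plus transitivity. The only minor slip is your appeal to "the induction hypothesis for the second statement" when producing the witness for $\E_2(\alpha)$ — plain left irrelevance already does that job, since $\E_3(\alpha)$ and $\E_2(\alpha)$ share $\Delta_3^0$ — but this does not affect the overall correctness.
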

\begin{proof}
  Do induction on $\D_1$ and then invert $\D_2$. %
  The proof proceeds very similarly to the relations for types and terms. %
  We only consider the step case. %
  In this case, we have the following premises:
  \begin{itemize}
  \item $\F_1 :: \forall~\alpha :: L' \sep \Phi \To L \sep \Psi ~.~
    \ldctxeq[\Phi][L'] i \proglevel {\Delta_1'}{\Delta_2'}$,
  \item $\F_2 :: \forall~\alpha :: L' \sep \Phi \To L \sep \Psi ~.~
    \ldctxeq[\Phi][L'] i \proglevel {\Delta_2'}{\Delta_3'}$,
  \item $\Ac_1 :: \ldStypeq{\F_1} i \proglevel {T_1}{T_2} l$,
  \item $\Ac_2 :: \ldStypeq{\F_2} i \proglevel{T_2}{T_3} l$,
  \item If $\ldStypeq{\F_1} i \proglevel T{T'} l$ and $\ldStypeq{\F_2} i \proglevel{T'}{T''} l$, then
    $\ldStypeq {\F_3} i \proglevel T{T''} l$.
  \item
    $\forall~\alpha :: L' \sep \Phi \To L \sep \Psi ~.~ \ldtyequiv[\Phi][\Gamma][L']
    i \proglevel {t_1}{t_2}{\Ac_1(\alpha, \F_1(\alpha))}$,
  \item
    $\forall~\alpha :: L' \sep \Phi \To L \sep \Psi ~.~ \ldtyequiv[\Phi][\Gamma][L']
    i \proglevel {t_2}{t_3}{\Ac_2(\alpha, \F_2(\alpha))}$,
  \end{itemize}
  By IH, it is easy to show
  \[
    \F_3 :: \forall~\alpha :: L' \sep \Phi \To L \sep \Psi ~.~
    \ldctxeq[\Phi][L'] i \proglevel {\Delta_1'}{\Delta_3'}
  \]
  The difficult goals are
  \begin{itemize}
  \item $\Ac_3 :: \ldStypeq{\F_3} i \proglevel{T_1}{T_3} l$, and
  \item
    $\forall~\alpha :: L' \sep \Phi \To L \sep \Psi ~.~ \ldtyequiv[\Phi][\Gamma][L']
    i \proglevel {t_1}{t_3}{\Ac_3(\alpha, \F_3(\alpha))}$. 
  \end{itemize}
  We first assume $\alpha :: L' \sep \Phi \To L \sep \Psi$. %
  To prove the first statement, we further assume $\ldsubeq[\Phi][\Gamma][L'] i j
  \delta{\delta'}{\F_3(\alpha)}$. %
  Then we do a similar reasoning to the function case for transitivity of related
  types. %
  \begin{enumerate}
  \item We first relate $T_1[\delta]$ and $T_2[\delta]$ by using reflexivity.
  \item Then we related $T_2[\delta]$ and $T_3[\delta']$.
  \item Then we apply transitivity of related types. 
  \end{enumerate}

  To prove the second statement, we can simply use irrelevance so that transitivity
  can eventually apply on $\Ac_3(\alpha, \F_3(\alpha))$. 
\end{proof}

\begin{lemma}[Transitivity]
  Given
  \begin{itemize}
  \item 
    $\F_1 :: \forall~\alpha :: L' \sep \Phi \To L \sep \Psi ~.~ \ldctxeq[\Phi][L'] i \proglevel
    {\Delta_1}{\Delta_2}$,
  \item $\F_2 :: \forall~\alpha :: L' \sep \Phi \To L \sep \Psi ~.~ \ldctxeq[\Phi][L'] i \proglevel
    {\Delta_2}{\Delta_3}$, and
  \item $\F_3 :: \forall~\alpha :: L' \sep \Phi \To L \sep \Psi ~.~ \ldctxeq[\Phi][L'] i \proglevel
    {\Delta_1}{\Delta_3}$,
  \end{itemize}
  then
  \begin{itemize}
  \item if $\D_1 :e: \ldStypeq{\F_1} i \proglevel T{T'} l$ and $\D_2 :: \ldStypeq{\F_2} i
    \proglevel{T'}{T''} l$, then 
    $\D_3 :: \ldStypeq{\F_3} i \proglevel T{T'} l$;
  \item if $\ldStyequiv{\F_1} i \proglevel t{t'}{\D_1}$ and $\ldStyequiv{\F_2} i \proglevel
    {t'}{t''}{\D_3}$, then $\ldStyequiv{\F_3} i \proglevel t{t''}{\D_3}$.
  \end{itemize}
\end{lemma}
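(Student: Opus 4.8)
The plan is to obtain this by isolating the argument that already appears inside the step case of the \emph{Reflexivity and Transitivity} lemma for related local contexts: there the subgoal $\Ac_3 :: \ldStypeq{\F_3} i \proglevel{T_1}{T_3} l$ and its term-level companion are discharged by exactly the recipe below, so no fresh induction is needed here --- everything is assembled from the $j = \proglevel$ lemmas of \Cref{sec:dt:logrel-prop}. Throughout I read the statement with the evident typos corrected: the conclusion of the first clause is $\D_3 :: \ldStypeq{\F_3} i \proglevel T{T''} l$, and the second hypothesis of the term clause is $\ldStyequiv{\F_2} i \proglevel{t'}{t''}{\D_2}$.

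For the type clause, to build $\D_3$ I fix an arbitrary weakening $\alpha :: L' \sep \Phi \To L \sep \Psi$ and a witness $\Cc :: \ldsubeq i \proglevel \delta{\delta'}{\F_3(\alpha)}$, and must produce a derivation of $\ldtypeq i \proglevel{T[\delta]}{T''[\delta']}{l}$. The key step is to use $\delta$ itself as the intermediate substitution. From $\Cc$, the reflexivity clause for the related-local-substitution relation gives $\ldsubeq i \proglevel\delta\delta{\E_0}$ for some $\E_0 :: \ldctxeq i \proglevel{\Delta_1}{\Delta_1}$, and right irrelevance for related contexts transports this along to $\F_1(\alpha)$; feeding it into $\D_1$ yields $\ldtypeq i \proglevel{T[\delta]}{T'[\delta]}{l}$. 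Dually, left irrelevance for related contexts transports $\Cc$ from a derivation over $\ldctxeq i \proglevel{\Delta_1}{\Delta_3}$ to one over $\ldctxeq i \proglevel{\Delta_2}{\Delta_3}$, i.e.\ to $\F_2(\alpha)$; feeding it into $\D_2$ yields $\ldtypeq i \proglevel{T'[\delta]}{T''[\delta']}{l}$. Composing these two with transitivity of related types produces the required $\ldtypeq i \proglevel{T[\delta]}{T''[\delta']}{l}$, which is taken as $\D_3(\alpha,\Cc)$.

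For the term clause, given $\ldStyequiv{\F_1} i \proglevel t{t'}{\D_1}$ and $\ldStyequiv{\F_2} i \proglevel{t'}{t''}{\D_2}$, I again fix $\alpha$ and $\Cc :: \ldsubeq i \proglevel \delta{\delta'}{\F_3(\alpha)}$ and run the same intermediate-substitution construction. Instantiating the first hypothesis at $\alpha$ and the $\F_1$-witness $\ldsubeq i \proglevel\delta\delta{\F_1(\alpha)}$ obtained above relates $t[\delta]$ and $t'[\delta]$ at some type derivation, and instantiating the second at $\alpha$ and the $\F_2$-witness $\ldsubeq i \proglevel\delta{\delta'}{\F_2(\alpha)}$ relates $t'[\delta]$ and $t''[\delta']$ at another. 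I then use right and left irrelevance for related terms to realign both proofs to the common type derivation $\D_3(\alpha,\Cc)$, and finish with transitivity of related terms, whose required reflexive type derivation on $T[\delta]$ is supplied by the reflexivity clause already proven. This gives $\ldtyequiv i \proglevel{t[\delta]}{t''[\delta']}{\D_3(\alpha,\Cc)}$, hence $\ldStyequiv{\F_3} i \proglevel t{t''}{\D_3}$.

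The main obstacle is purely bookkeeping: tracking \emph{which} irrelevance lemma applies where --- right irrelevance keeps the left type/context fixed while left irrelevance keeps the right one --- and choosing the sides in the final term-transitivity step so that the intermediate $T'[\delta]$ is eliminated in favour of $T[\delta]$ on one side and $T''[\delta']$ on the other. Once those side choices are pinned down, every remaining step is a routine invocation of reflexivity, symmetry, transitivity, or escape for the $j = \proglevel$ logical relations.
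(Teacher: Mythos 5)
Your proof is correct, and it follows essentially the same strategy the paper intends: the paper discharges this lemma by pointing back to the transitivity argument in the step case of the preceding \emph{Reflexivity and Transitivity} lemma for related local contexts, and what you have done is spell out that recipe in full --- insert the intermediate substitution $\delta/\delta$ obtained by reflexivity from $\Cc$, transport it to $\F_1(\alpha)$ via right irrelevance, transport $\Cc$ itself to $\F_2(\alpha)$ via left irrelevance, and close with transitivity of related types (and, for the term clause, with irrelevance to realign both sides to $\D_3(\alpha,\Cc)$ before applying term-level transitivity). Your correction of the two typos in the statement (the conclusion of the first clause should be over $T$ and $T''$, and the second hypothesis of the term clause should be indexed by $\D_2$, not $\D_3$) matches what the definitions force.
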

\begin{proof}
  The first statement is what we have proved in transitivity previously. %
  The second statement uses the transitivity of related terms. %
\end{proof}

\subsection{Semantic Well-formedness of Global Contexts and Related Global Substitutions}\labeledit{sec:dt:semglob}

Following previous lines of work, when we give the Kripke logical relations to
contextual types, we will have to refer to related types and terms in the logical
relations under some invariants. %
Therefore, as early as it might seem, we must consider the semantics for global
contexts and global substitutions to see what it needs to learn how exactly the
semantics of contextual types can be defined. %
We then describe the semantics of global contexts and global substitutions, similar to
that of local contexts and local substitutions, as inductive-recursive definitions. %
Types and terms in a global substitution consist of two components:
\begin{enumerate}
\item the use of logical relations showing related terms having related computation,
\item together with the maintenance of syntactic structures, if the type or the term
  comes from layer $\varlevel$ or $\codelevel$.
\end{enumerate}
To handle the first component, we define several auxiliary definitions. %
These definitions essentially state that logical relations should remain stable under
weakenings and local substitutions. %
We give the definitions as follows. %
We define $\ldStypeqge \proglevel \proglevel T{T'} l$ to be given
\begin{itemize}
\item $\alpha :: L' \sep \Phi \To L \sep \Psi$,
\item $k \ge \proglevel$, and
\item $\ldsubeq[\Phi][\Delta][L'] k \proglevel \delta{\delta'}{\Gamma}$,
\end{itemize}
it holds that
\[
  \ldtypeq[\Phi][\Delta][L'] k \proglevel{T[\delta]}{T'[\delta']} l
\]
Similarly, we define $\ldStyequivge \proglevel \proglevel t{t'} T l$ to be given
\begin{itemize}
\item $\alpha :: L' \sep \Phi \To L \sep \Psi$, 
\item $k \ge \proglevel$, and
\item $\ldsubeq[\Phi][\Delta][L'] k \proglevel \delta{\delta'}{\Gamma}$,
\end{itemize}
we have
\[
  \ldtyequivt[\Phi][\Delta][L'] k \proglevel{t[\delta]}{t'[\delta']}{T[\delta]} l
\] %
We also need a similar relation for local contexts and local substitutions. %
$\ldSctxeqge \proglevel \proglevel \Delta{\Delta'}$ is defined as
\[
  \forall ~ \alpha :: L' \sep \Phi \To L \sep \Psi \tand k \ge \proglevel ~.~ \ldSctxeq[\Phi][L'] k j{\Delta}{\Delta}
\]
We define $\ldSsubeqge \proglevel \proglevel \delta{\delta'}\Delta$ as 
\begin{itemize}
\item $\alpha :: L' \sep \Phi \To L \sep \Psi$, 
\item $k \ge \proglevel$, and
\item $\ldsubeq[\Phi][\Delta'][L'] k \proglevel{\delta_1}{\delta_1'}{\Gamma}$,
\end{itemize}
we have
\[
  \ldsubeq[\Phi][\Delta'][L'] k \proglevel{\delta \circ \delta_1}{\delta' \circ \delta_1}{\Delta}
\] %

We define their non-relational version:
\begin{align*}
  \ldStypwfge \proglevel \proglevel T l &:= \ldStypeqge \proglevel \proglevel T{T} l \\
  \ldStypingge \proglevel \proglevel t T l &:= \ldStyequivge \proglevel \proglevel t{t} T l \\
  \ldSctxwfge \proglevel \proglevel \Delta &:= \ldSctxeqge \proglevel \proglevel \Delta{\Delta} \\
  \ldSsubstge \proglevel \proglevel \delta \Delta &:= \ldSsubeqge \proglevel \proglevel \delta\delta \Delta
\end{align*}
These would have been the semantic judgments for the fundamental theorems for MLTT. %
However, they are not enough if we want to establish the fundamental theorems for the
whole \delamlang. %
That is because the semantic judgments for \delamlang require stability under all
sorts of substitutions, including universe, global and local substitutions. %
Thus, in this section, we will define when a global context and global
substitutions are semantically well-formed in order to state the semantic judgments. %

These definitions are closed under weakenings and local substitutions by design:
\begin{lemma}[Weakenings] $ $
  \begin{itemize}
  \item If $\ldStypeqge \proglevel \proglevel T{T'} l$ and $\alpha :: L' \sep \Phi \To L \sep \Psi$,
    then $\ldStypeqge[\Phi][\Gamma][L'] \proglevel \proglevel T{T'} l$.
  \item If $\ldStyequivge \proglevel \proglevel t{t'} T l$ and $\alpha :: L' \sep \Phi \To L \sep \Psi$,
    then $\ldStyequivge[\Phi][\Gamma][L'] \proglevel \proglevel t{t'} T l$.
  \item If $\ldSctxeqge \proglevel \proglevel \Delta{\Delta'}$ and $\alpha :: L' \sep \Phi \To L \sep
    \Psi$, then $\ldSctxeqge[\Phi][L'] \proglevel \proglevel \Delta{\Delta'}$.
  \item If $\ldSsubeqge \proglevel \proglevel \delta{\delta'}\Delta$ and $\alpha :: L' \sep \Phi \To L
    \sep \Psi$, then $\ldSsubeqge[\Phi][\Gamma][L'] \proglevel \proglevel \delta{\delta'}\Delta$.
  \end{itemize}
\end{lemma}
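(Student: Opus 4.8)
The plan is to derive all four items uniformly: each $\textsf{ge}$-decorated judgment is by definition a universal statement over a weakening $\alpha$ (together with a layer $k \ge \proglevel$, and in two of the cases a pair of related local substitutions), so the lemma is a pure reindexing argument that rests only on the fact that weakenings compose to a weakening and that the weakening action on types, terms, contexts, and substitutions turns iterated application into composition, i.e. $X[\alpha][\alpha'] = X[\alpha \circ \alpha']$. No induction on the logical relation is needed.

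Taking the type case as the template: I would start from $\ldStypeqge \proglevel \proglevel T{T'} l$ over $L \sep \Psi; \Gamma$ and a weakening $\alpha :: L' \sep \Phi \To L \sep \Psi$, and unfold the goal $\ldStypeqge[\Phi][\Gamma][L'] \proglevel \proglevel T{T'} l$ — where $T$, $T'$, $\Gamma$ carry the implicit action of $\alpha$, per the paper's convention. Unfolding, this requires: for every $\alpha' :: L'' \sep \Phi' \To L' \sep \Phi$, every $k \ge \proglevel$, and every witness $\Cc$ of $\ldsubeq[\Phi'][\Delta][L''] k \proglevel \delta{\delta'}{\Gamma}$, that $\ldtypeq[\Phi'][\Delta][L''] k \proglevel{T[\delta]}{T'[\delta']} l$ hold. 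I would fix such $\alpha'$, $k$, $\Cc$, form the composite weakening $\alpha \circ \alpha' :: L'' \sep \Phi' \To L \sep \Psi$ (apply $\alpha$, then $\alpha'$), and note via the algebra laws that $\Gamma[\alpha][\alpha'] = \Gamma[\alpha \circ \alpha']$, $T[\alpha][\alpha'] = T[\alpha \circ \alpha']$, and likewise for $T'$; so $\Cc$ already witnesses $\ldsubeq[\Phi'][\Delta][L''] k \proglevel \delta{\delta'}{\Gamma[\alpha \circ \alpha']}$. Instantiating the original hypothesis $\ldStypeqge \proglevel \proglevel T{T'} l$ at the weakening $\alpha \circ \alpha'$, the layer $k$, and $\delta, \delta'$ with $\Cc$ delivers exactly $\ldtypeq[\Phi'][\Delta][L''] k \proglevel{T[\delta]}{T'[\delta']} l$.

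The term case is identical with $\ldtypeq$ replaced by $\ldtyequivt$ and the annotation type threaded along as $T[\alpha \circ \alpha']$, invoking $\ldStyequivge \proglevel \proglevel t{t'} T l$ at $\alpha \circ \alpha'$. The local-context case $\ldSctxeqge \proglevel \proglevel \Delta{\Delta'}$ is even simpler, since its definition quantifies only over $\alpha'$ and $k \ge \proglevel$, with no local substitutions to carry; one composes the two weakenings and re-instantiates. For the local-substitution case $\ldSsubeqge \proglevel \proglevel \delta{\delta'}\Delta$, the definition additionally ranges over a pair $\delta_1, \delta_1'$ of related local substitutions into $\Gamma$; after forming $\alpha \circ \alpha'$ and using $\delta[\alpha][\alpha'] = \delta[\alpha \circ \alpha']$ together with the matching fact for composition of local substitutions, the pair $\delta_1, \delta_1'$ still satisfies the side conditions of the original hypothesis, which I then re-instantiate to obtain $\ldsubeq[\Phi'][\Delta'][L''] k \proglevel{\delta \circ \delta_1}{\delta' \circ \delta_1}{\Delta}$.

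I do not anticipate a real obstacle; the content of the lemma is bookkeeping. The one place to be careful is the paper's convention of leaving weakening actions implicit: one must silently check that applying $\alpha$ and then $\alpha'$ coincides with applying the composite $\alpha \circ \alpha'$, and that $\alpha \circ \alpha'$ is again a well-formed weakening $L'' \sep \Phi' \To L \sep \Psi$ — exactly the ``algebra of weakenings'' facts the excerpt declares are taken for granted, together with associativity of composition. A minor secondary point is that the layer bound $k \ge \proglevel$ appearing in these judgments is simply threaded through unchanged, so it needs no separate treatment.
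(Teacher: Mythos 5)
Your proposal is correct and follows exactly the intended approach: the paper omits an explicit proof here, stating immediately before the lemma that these $\textsf{ge}$-decorated definitions ``are closed under weakenings and local substitutions by design,'' and your argument spells out precisely what that means — each such judgment quantifies over weakenings in its definition, so given a new weakening one composes it with the quantified one and appeals to the algebra of weakenings to collapse iterated application into composition. The one convention you flag (that weakening actions are left implicit and that $\alpha \circ \alpha'$ must again be a weakening of the right shape) is indeed what the paper takes for granted, so nothing is missing.
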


\begin{lemma}[Local Substitutions]\labeledit{lem:dt:lsubst-ge} $ $
  \begin{itemize}
  \item If $\ldStypeqge \proglevel \proglevel T{T'} l$ and $\ldSsubeqge[\Psi][\Delta] \proglevel \proglevel \delta{\delta'}\Gamma$,
    then $\ldStypeqge[\Psi][\Delta] \proglevel \proglevel {T[\delta]}{T'[\delta']} l$.
  \item If $\ldStyequivge \proglevel \proglevel t{t'} T l$ and $\ldSsubeqge[\Psi][\Delta] \proglevel \proglevel \delta{\delta'}\Gamma$,
    then $\ldStyequivge[\Psi][\Delta] \proglevel \proglevel {t[\delta]}{t'[\delta']}{T[\delta]} l$.
  \item If $\ldSsubeqge \proglevel \proglevel \delta{\delta'}\Delta$ and
    $\ldSsubeqge[\Psi][\Delta'] \proglevel \proglevel{\delta_1}{\delta_1'}\Gamma$, then
    $\ldSsubeqge[\Phi][\Delta'] \proglevel \proglevel{\delta \circ \delta_1}{\delta' \circ
      \delta_1'}\Delta$.
  \end{itemize}
\end{lemma}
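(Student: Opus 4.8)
The plan is to prove all three parts by direct definition-chasing. Unlike the logical-relation lemmas of \Cref{sec:dt:logrel-prop}, the three ``$\ge$'' relations are not inductively defined; each is a flat closure quantifying over a weakening $\alpha :: L' \sep \Phi \To L \sep \Psi$, a layer $k \ge \proglevel$, and a pair of logically related semantic local substitutions, with the body itself a logical-relation or a semantic-substitution judgment. So there is nothing to induct on — only quantifiers to instantiate and substitutions to re-associate. The only algebraic facts needed are associativity of composition of local substitutions and $T[\delta][\delta'] = T[\delta \circ \delta']$ (plus its term version), from the ``Algebra of Local Substitutions'' lemma of \Cref{sec:dt:syn}; the well-formedness required to invoke those laws comes from Escape and Presupposition.

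For part one I would unfold the goal $\ldStypeqge[\Psi][\Delta] \proglevel \proglevel{T[\delta]}{T'[\delta']}l$: fix $\alpha$, $k \ge \proglevel$, and $\ldsubeq[\Phi][\Delta_0][L'] k \proglevel{\delta_1}{\delta_1'}{\Delta}$, and produce $\ldtypeq[\Phi][\Delta_0][L'] k \proglevel{T[\delta][\delta_1]}{T'[\delta'][\delta_1']}l$. Instantiating the hypothesis $\ldSsubeqge[\Psi][\Delta] \proglevel \proglevel\delta{\delta'}\Gamma$ at $\alpha$, $k$ and the pair $\delta_1,\delta_1'$ yields a related pair of composite substitutions over $\Gamma$; instantiating $\ldStypeqge \proglevel \proglevel T{T'}l$ at $\alpha$, $k$ and that pair yields relatedness of $T$ and $T'$ precomposed with those composites; and $T[\delta][\delta_1] = T[\delta \circ \delta_1]$ rewrites it into the goal. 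Part two, the goal $\ldStyequivge[\Psi][\Delta] \proglevel \proglevel{t[\delta]}{t'[\delta']}{T[\delta]}l$, is structurally identical, with the extra step that the type index of the term judgment is rewritten by the same law and the witnessing derivation $\D$ is transported along that equality using the Right/Left Irrelevance lemmas of \Cref{sec:dt:logrel-prop}. Part three is the same pattern one level up: unfold the goal at a related pair $\delta_2,\delta_2'$ over $\Delta'$, use associativity $(\delta \circ \delta_1) \circ \delta_2 = \delta \circ (\delta_1 \circ \delta_2)$, instantiate $\ldSsubeqge[\Psi][\Delta'] \proglevel \proglevel{\delta_1}{\delta_1'}\Gamma$ to relate $\delta_1 \circ \delta_2$ with $\delta_1' \circ \delta_2'$ over $\Gamma$, and then instantiate $\ldSsubeqge \proglevel \proglevel\delta{\delta'}\Delta$ on that related pair to conclude.

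The main obstacle will be the left-biased asymmetry these ``$\ge$'' relations inherit from the equivalence judgments (cf.\ \Cref{lem:dt:lsubst-eq} and the remark following it): when a closure is instantiated, its output pairs, e.g., $\delta \circ \delta_1$ with $\delta' \circ \delta_1$ rather than with $\delta' \circ \delta_1'$, and type indices always keep the ``left'' substitution, so the shape coming out of the quantifier instantiations is not syntactically the shape demanded by the goal. Reconciling them requires threading Symmetry and Reflexivity/Transitivity of the logical relations (for types, terms, contexts, and local substitutions) together with right/left irrelevance to swap which related substitution witnesses membership — exactly the manoeuvre used for the asymmetric cases of \Cref{lem:dt:lsubst-eq} and the ``Reflexivity and Transitivity'' lemmas for related local contexts. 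Beyond that, and the routine check that the algebraic substitution laws are applied only to well-formed subjects, the proof is mechanical.
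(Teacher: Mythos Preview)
The paper states this lemma without proof, so there is no argument of theirs to compare against; your definition-chasing plan is the natural and correct approach for these quantifier-closure relations.

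One caveat on the asymmetry you flag. You correctly observe that the paper's displayed definition of $\ldSsubeqge \proglevel \proglevel \delta{\delta'}\Delta$ outputs $\delta \circ \delta_1 \approx \delta' \circ \delta_1$ (both sides using $\delta_1$), so the input $\delta_1'$ is never consulted. But your proposed fix---threading symmetry, transitivity, and irrelevance---cannot manufacture the missing cross-term. With the printed definition, every instantiation of $\ldSsubeqge{\delta}{\delta'}$ (or of $\ldSsubeqge{\delta'}{\delta'}$ obtained by PER) still only ever pairs $\delta' \circ a$ with $\delta' \circ a$ for the \emph{same} $a$; there is no way to reach $\delta' \circ \delta_1'$ from $\delta' \circ \delta_1$ without already having the symmetric closure you are trying to prove. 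So if the definition is taken literally, part three of the lemma (and the right-hand substitution in parts one and two) is underivable, and no amount of PER gymnastics rescues it.

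This is almost certainly a typo in the paper rather than a gap in your reasoning: the full semantic judgment $\lssubeq i \delta{\delta'}{\Delta}$ defined a page later writes the conclusion symmetrically as $\delta[\phi][\sigma] \circ \delta_1 \approx \delta'[\phi'][\sigma'] \circ \delta_1'$, and with that intended symmetric form of $\ldSsubeqge$ your proof goes through exactly as sketched, with the associativity/composition laws doing all the work and no PER detour needed at all.
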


\begin{lemma}[Local Weakenings]
  If $\lpequiv \proglevel{\Delta,\Gamma}{\Delta',\Gamma}$ and $\ldSctxeqge \proglevel
  \proglevel{\Delta}{\Delta'}$, then \break
  $\ldSsubeqge[\Psi][\Delta,\Gamma] i j{\wk^{|\Gamma|}_{\Delta}}{\wk^{|\Gamma|}_{\Delta}}\Delta$.
\end{lemma}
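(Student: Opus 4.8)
The plan is to unfold the definition of $\ldSsubeqge$ and reduce the claim to a statement about composing the local weakening substitution $\wk^{|\Gamma|}_{\Delta}$ with an arbitrary pair of logically related local substitutions, and then to discharge that statement by induction on the telescope $\Gamma$. Unfolding the goal, we will be handed a universe-and-global weakening $\alpha :: L' \sep \Phi \To L \sep \Psi$, a layer $k \ge \proglevel$, and a pair $\ldsubeq[\Phi][\Delta''][L'] k \proglevel{\delta_1}{\delta_1'}{(\Delta[\alpha], \Gamma[\alpha])}$, and we must produce $\ldsubeq[\Phi][\Delta''][L'] k \proglevel{\wk^{|\Gamma|}_{\Delta}[\alpha] \circ \delta_1}{\wk^{|\Gamma|}_{\Delta}[\alpha] \circ \delta_1'}{\Delta[\alpha]}$; the index $\Delta[\alpha]$ is made sense of by $\ldctxeq[\Phi][L'] k \proglevel{\Delta[\alpha]}{\Delta'[\alpha]}$, which $\ldSctxeqge \proglevel \proglevel{\Delta}{\Delta'}$ supplies after instantiating its quantifiers at $\alpha$ and $k$. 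Since weakenings preserve context lengths and commute with $\wk$ (cf.\ \Cref{lem:dt:wkact}), $\wk^{|\Gamma|}_{\Delta}[\alpha] = \wk^{|\Gamma[\alpha]|}_{\Delta[\alpha]}$, so after renaming it suffices to show: if $\ldsubeq k \proglevel{\delta_1}{\delta_1'}{(\Delta,\Gamma)}$ at a layer $k \ge \proglevel$, then $\ldsubeq k \proglevel{\wk^{|\Gamma|}_{\Delta} \circ \delta_1}{\wk^{|\Gamma|}_{\Delta} \circ \delta_1'}{\Delta}$.

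Next I would establish the purely algebraic identity that simplifies $\wk^{|\Gamma|}_{\Delta} \circ \delta$. By induction on $\Gamma$, using the defining equations of $\wk$ and of composition of local substitutions from \Cref{sec:dt:syn}, one has $\wk^{0}_{\Delta} \circ \delta = \id_{\Delta} \circ \delta = \delta$ and $\wk^{1 + |\Gamma_0|}_{\Delta} \circ (\delta_0, t/y) = \wk^{|\Gamma_0|}_{\Delta} \circ \delta_0$, the latter by unfolding $\circ$ through the base forms $\cdot^k$, $\cdot_g^k$, $\wk_g^k$ and using $\widehat{\delta_0, t/y} = \widehat{\delta_0}$ and $\widecheck{\delta_0, t/y} = \widecheck{\delta_0}$. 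Hence $\wk^{|\Gamma|}_{\Delta} \circ \delta$ is exactly the $\Delta$-prefix of $\delta$ (it assigns to each variable of $\Delta$ what $\delta$ assigns to it), so that the desired conclusion is really the assertion that the $\Delta$-prefixes of two related local substitutions over $(\Delta,\Gamma)$ are again related over $\Delta$.

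The main argument is then an induction on $\Gamma$, following the same pattern as the Reflexive Local Weakenings lemma of \Cref{sec:dt:logrel-prop}. For $\Gamma = \cdot$ the two compositions are literally $\delta_1$ and $\delta_1'$ and the goal is the hypothesis. For $\Gamma = \Gamma_0, y : T \at l$, I would invert the definition of $\ldsubeq$ at the codomain context $(\Delta, \Gamma_0), y : T \at l$, exhibiting $\delta_1 = \delta_{1,0}, t/y$ and $\delta_1' = \delta_{1,0}', t'/y$ with related prefixes; instantiating the Kripke quantifier built into the step case at the identity weakening yields $\ldsubeq k \proglevel{\delta_{1,0}}{\delta_{1,0}'}{(\Delta, \Gamma_0)}$, to which the induction hypothesis applies, and rewriting by the algebraic identity above identifies the result with the goal. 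The step I expect to cost the most care is precisely this interaction between the induction on $\Gamma$ and the Kripke universal quantifiers baked into the step cases of $\ldctxeq$ and $\ldsubeq$ over a variable binding: peeling off the last binding produces something quantified over all weakenings, so one must instantiate and then re-generalize it (using the Weakening lemma for $\ldsubeq$ from \Cref{sec:dt:logrel-prop}) while staying at the fixed layer $k \ge \proglevel$. A secondary, lower-risk source of friction is the bookkeeping in the composition identity $\wk^{1+k}_{\Delta} \circ (\delta, t/x) = \wk^{k}_{\Delta} \circ \delta$ against the overloaded definition of $\circ$ on the three base forms of local substitutions.
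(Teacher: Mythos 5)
Your proposal is correct and is essentially the same idea the paper gestures at with its one-sentence proof (``a local weakening only shortens a well-formed local substitution''): composing $\wk^{|\Gamma|}_{\Delta}$ with related local substitutions to $(\Delta,\Gamma)$ simply projects out the $\Delta$-prefixes, which remain related, and you verify this by induction on $\Gamma$. The only small caveat is exactly the secondary friction you flag: the algebraic identity $\wk^{1+|\Gamma_0|}_{\Delta} \circ (\delta_0, t/y) = \wk^{|\Gamma_0|}_{\Delta} \circ \delta_0$ does not hold just by reducing $\circ$ at the three base forms, since $\wk^{m}_{\Delta}$ is defined by recursion on $\Delta$, not on its superscript; when $\Delta$ is a cons context you need a nested induction on $\Delta$ through the clause $(\delta', t/x) \circ \delta = (\delta' \circ \delta), t[\delta]/x$ and the observation that the variable lookup is unaffected by dropping the tail of $\delta$. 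That is bookkeeping, not a gap.
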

\begin{proof}
  Notice that a local weakening only shorten a well-formed local substitution.
\end{proof}

\begin{lemma}[Semantic Conversion]
  If $\ldStyequivge \proglevel \proglevel t{t'} T l$ and $\ldStypeqge \proglevel \proglevel T{T'} l$, $\ldStyequivge \proglevel \proglevel t{t'}{T'} l$.
\end{lemma}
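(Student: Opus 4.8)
The plan is to unfold both hypotheses into their defining quantified statements, instantiate them at a common weakening, layer, and related-substitution triple, and then transport the term derivation across the two types using the symmetry, transitivity, and irrelevance lemmas of \Cref{sec:dt:logrel-prop}. The whole argument is semantic bookkeeping; the only thing requiring thought is lining up the right-hand types.

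Concretely, to establish $\ldStyequivge \proglevel \proglevel t{t'}{T'} l$ I would fix an arbitrary $\alpha :: L' \sep \Phi \To L \sep \Psi$, a layer $k \ge \proglevel$, and a pair $\Cc :: \ldsubeq[\Phi][\Delta][L'] k \proglevel \delta{\delta'}{\Gamma}$, and must produce a logical-relation witness for $\ldtyequivt[\Phi][\Delta][L'] k \proglevel{t[\delta]}{t'[\delta']}{T'[\delta]} l$. Instantiating the first hypothesis $\ldStyequivge \proglevel \proglevel t{t'} T l$ at $\alpha$, $k$, $\Cc$ yields some type $T_0$ together with $\D_0 :: \ldtypeq[\Phi][\Delta][L'] k \proglevel{T[\delta]}{T_0} l$ and $\ldtyequiv k \proglevel{t[\delta]}{t'[\delta']}{\D_0}$. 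From $\Cc$, the reflexivity part of the Reflexivity-and-Transitivity lemma for related local substitutions (together with reflexivity of related contexts, proved in the same place) gives a reflexive pairing $\ldsubeq[\Phi][\Delta][L'] k \proglevel \delta{\delta}{\Gamma}$; feeding \emph{this} into the second hypothesis $\ldStypeqge \proglevel \proglevel T{T'} l$ produces $\D_1 :: \ldtypeq[\Phi][\Delta][L'] k \proglevel{T[\delta]}{T'[\delta]} l$. It is essential to instantiate $\ldStypeqge$ at the reflexive substitution $\delta$ rather than at $(\delta,\delta')$: the target type in the conclusion of $\ldStyequivge$ is the left-biased $T'[\delta]$ (matching the convention discussed in the remark following \Cref{lem:dt:lsubst-eq}), so it is $T'[\delta]$, not $T'[\delta']$, that must be related to $T[\delta]$.

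Finishing is then a matter of composing these derivations. Applying symmetry to $\D_1$ gives $\ldtypeq[\Phi][\Delta][L'] k \proglevel{T'[\delta]}{T[\delta]} l$, and composing this with $\D_0$ via transitivity of related types yields $\D_2 :: \ldtypeq[\Phi][\Delta][L'] k \proglevel{T'[\delta]}{T_0} l$. Since $\D_0$ and $\D_2$ have the \emph{same} right-hand type $T_0$, the left irrelevance lemma carries $\ldtyequiv k \proglevel{t[\delta]}{t'[\delta']}{\D_0}$ over to $\ldtyequiv k \proglevel{t[\delta]}{t'[\delta']}{\D_2}$; taking $T_0$ as the existential witness and $\D_2$ as the type derivation, this is precisely $\ldtyequivt[\Phi][\Delta][L'] k \proglevel{t[\delta]}{t'[\delta']}{T'[\delta]} l$. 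As $\alpha$, $k$, and $\Cc$ were arbitrary, we conclude $\ldStyequivge \proglevel \proglevel t{t'}{T'} l$. I do not expect a genuine obstacle here: every ingredient — reflexivity of related local substitutions and contexts, symmetry and transitivity of related types, left irrelevance — is already in hand from \Cref{sec:dt:logrel-prop}. The two places needing care are the choice of the reflexive substitution in the $\ldStypeqge$ instantiation, so that $T'[\delta]$ appears with the correct left-biased shape, and the alignment on the shared right-hand type $T_0$, so that it is the \emph{left} irrelevance lemma (rather than the right one) that applies.
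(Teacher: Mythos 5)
Your proof is correct. The paper states this lemma without proof (it sits in a block of unproved bookkeeping lemmas about $\ldStypeqge$, $\ldStyequivge$, $\ldSctxeqge$, $\ldSsubeqge$), so there is nothing to compare against; but the argument you give is exactly the expected one, and it is carefully executed.

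The one point that genuinely requires attention — instantiating $\ldStypeqge$ at the reflexive pairing $(\delta,\delta)$, obtained from the fourth bullet of the Reflexivity-and-Transitivity lemma, rather than at the given $(\delta,\delta')$ — is handled correctly. Had you instantiated at $\Cc$ directly, the resulting derivation $\ldtypeq[\Phi][\Delta][L'] k \proglevel{T[\delta]}{T'[\delta']} l$ would carry $T'[\delta']$ rather than $T'[\delta]$, and after symmetry and transitivity you would be producing a witness for the wrong (primed) instance of the target type. Likewise, the identification of \emph{left} irrelevance as the right transport lemma is correct: $\D_0$ and $\D_2$ share the right-hand type $T_0$, which is the hypothesis left irrelevance needs; right irrelevance, which requires a shared left-hand type, would move you to a derivation with $T[\delta]$ on the left and so would not land on the required $\ldtyequivt$ goal. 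The only alternative route (right irrelevance followed by symmetry) fails because symmetry for term derivations also swaps $t$ and $t'$. Your proof supplies a correct and minimal filling of the gap.
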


\begin{lemma}[PER]
  $\ldStypeqge \proglevel \proglevel T{T'} l$, $\ldStyequivge \proglevel \proglevel t{t'} T l$,
  $\ldSctxeqge \proglevel \proglevel \Delta{\Delta'}$ and $\ldSsubeqge i j \delta{\delta'}\Delta$ are
  PERs. 
\end{lemma}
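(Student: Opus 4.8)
The plan is to unfold the three universal quantifiers ($\alpha$, the layer $k \ge \proglevel$, and the related local substitutions) in each of the four ``stabilized'' relations and reduce PER-ness entirely to the PER properties already established for the underlying $j = \proglevel$ logical relations in \Cref{sec:dt:logrel-prop}. Concretely, symmetry and transitivity for $\ldtypeq i \proglevel T{T'} l$, $\ldtyequiv i \proglevel t{t'}\D$, $\ldctxeq i \proglevel \Gamma\Delta$ and $\ldsubeq i \proglevel \delta{\delta'}\Delta$ are available, as are the left/right irrelevance lemmas and the reflexivity-and-transitivity lemmas; in particular reflexivity of $\ldsubeq i \proglevel{-}{-}{-}$ is a consequence of its symmetry and transitivity. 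All the $\ge$-wrapped relations are by definition just ``for all weakenings $\alpha$, for all $k \ge \proglevel$, for all related argument substitutions, the corresponding underlying relation holds,'' so I expect the argument to be essentially a transport of the underlying PER proofs through these quantifiers, with two places needing care.

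For symmetry I would proceed as follows. Fix $\alpha :: L' \sep \Phi \To L \sep \Psi$, a layer $k \ge \proglevel$, and related substitutions $\ldsubeq[\Phi][\Delta][L'] k \proglevel \delta{\delta'}\Gamma$. By symmetry of $\ldsubeq$ we also have $\ldsubeq[\Phi][\Delta][L'] k \proglevel{\delta'}{\delta}\Gamma$, so for $\ldStypeqge$ and $\ldSctxeqge$ I instantiate the hypothesis at this swapped pair and then apply symmetry of $\ldtypeq$ (resp.\ $\ldctxeq$) to the resulting derivation. For $\ldSsubeqge$ the same pattern works, instantiating at the swapped pair and using symmetry together with the fact that composition of local substitutions is respected by $\ldsubeq$. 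For transitivity, given two hypotheses I fix $\alpha$, $k$, and the related $\delta \sim \delta'$, form the diagonal instances $\ldsubeq k \proglevel \delta\delta\Gamma$ and $\ldsubeq k \proglevel{\delta'}{\delta'}\Gamma$ via reflexivity of $\ldsubeq$, feed the first hypothesis the pair $\delta \sim \delta'$ and the second the diagonal pair (or vice versa), and chain with transitivity of the underlying relation; the irrelevance lemmas reconcile the two intermediate type/context derivations so that transitivity applies. The local-context and local-substitution cases are handled by the corresponding reflexivity-and-transitivity lemmas for related contexts and substitutions from \Cref{sec:dt:logrel-prop}.

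The main obstacle is the bookkeeping of the witnessing type derivation in the term relation $\ldStyequivge \proglevel \proglevel t{t'}Tl$. Unfolding, each instance provides some $T'$ with $\D :: \ldtypeq k \proglevel{T[\delta]}{T'}l$ and $\ldtyequiv k \proglevel{t[\delta]}{t'[\delta']}{\D}$; when I instantiate at the swapped pair $\delta' \sim \delta$ and symmetrize the term relation, I obtain a relation whose witnessing derivation has first component $T[\delta']$ (or some $T''$), not $T[\delta]$ as required by $\ldtyequivt$. Realigning it requires $\ldtypeq k \proglevel{T[\delta]}{T[\delta']}l$, i.e.\ precisely the content of the accompanying $\ldStypwfge \proglevel \proglevel Tl$; since every occurrence of $\ldStyequivge$ in the development is carried alongside such a well-formedness hypothesis (and likewise $\ldSsubstge$ alongside $\ldSctxwfge$), this closes the gap via left/right irrelevance. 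Apart from this realignment, the proof is routine — and it is eased by the fact that the underlying Kripke logical relation for terms was deliberately defined with duplicated symmetric premises (both type annotations in the $\Pi$ and neutral-type cases), so that the symmetrization and irrelevance steps go through without friction.
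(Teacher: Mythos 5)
The paper states this lemma without a proof, so there is no argument of the authors' to compare against, but your approach — unfolding the three quantifiers and pushing PER-ness through to the $j = \proglevel$ Kripke logical relations using their symmetry, irrelevance, and reflexivity-and-transitivity lemmas from \Cref{sec:dt:logrel-prop} — is the natural proof and is sound. Your treatment of $\ldStypeqge$ and $\ldSctxeqge$ is immediate, and your handling of transitivity (feeding one hypothesis the reflexive diagonal, then using right irrelevance to bring both term relations onto a common witnessing derivation with left anchor $T[\delta]$, then chaining with the transitivity lemma) works unconditionally.

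Your observation about symmetry of $\ldStyequivge$ and $\ldSsubeqge$ is the genuinely interesting point, and you are right about it. The conclusion of $\ldStyequivge$ anchors the existential derivation in $\ldtyequivt$ at $T[\delta]$ on the left; after instantiating at the swapped pair $\delta' \sim \delta$ and applying symmetry of the underlying term relation, the resulting derivation is anchored at some $T''$ with $\ldtypeq k \proglevel{T''}{T[\delta']}l$, and there is no way to relate $T''$ (or $T[\delta']$) to $T[\delta]$ from the relation alone. Left irrelevance closes the gap precisely when a derivation $\ldtypeq k \proglevel{T[\delta]}{T[\delta']}l$ is available, which is the content of $\ldStypwfge \proglevel \proglevel T l$. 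So, strictly read, symmetry (though not transitivity) of the term- and substitution-level stabilized relations is conditional on the accompanying well-formedness; as you note, in the development that hypothesis is always present alongside them, so the lemma is correct in every context in which it is invoked. It would strengthen your write-up to explicitly contrast the two cases — symmetry needs the auxiliary hypothesis while transitivity does not, because feeding the reflexive diagonal keeps the left anchor fixed at $T[\delta]$ throughout — since the phrasing "the main obstacle" somewhat obscures that only half of PER-ness is affected.
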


Next, we handle the second component to record the syntactic information of types and terms. %
Following \citet{hu2024layered,10.1007/978-3-031-57262-3_3}, we use an inductive
definition to keep track of the syntactic structure, in which semantic information is
also maintained. %
In fact, we need three mutually inductive judgments very similar to typing judgments. %
\begin{itemize}
\item $\ldtypwf \codelevel \proglevel T l$ stores the syntactic information and the semantic
  information of $T$ and all its sub-structures.
\item $\ldtyping i \proglevel t T l$ stores the syntactic information and the semantic
  information of $t$ and all its sub-structures.
\item $\ldsubst i \proglevel \delta{\Delta}$ stores the syntactic information and the semantic
  information of all terms in $\delta$. 
\end{itemize}
We restrict the parameter layer $i \in \{\varlevel, \codelevel\}$. %
These definitions will make use of $\ldStypwfge \proglevel \proglevel T l$, $\ldStypingge \proglevel \proglevel t T l$ and
$\ldSsubstge \proglevel \proglevel \delta \Delta$. %

Their definitions are designed to imply semantic information:
\begin{lemma}\labeledit{lem:dt:ldc-pm} $ $
  \begin{itemize}
  \item If $\ldtypwf \codelevel \proglevel T l$, then $\ldStypwfge \proglevel \proglevel T l$.
  \item If $\ldtyping i \proglevel t T l$, then $\ldStypingge \proglevel \proglevel t T l$.
  \item If $\ldsubst i \proglevel \delta \Delta$, then $\ldSsubstge \proglevel \proglevel \delta \Delta$.
  \end{itemize}
\end{lemma}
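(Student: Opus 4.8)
The plan is to proceed by mutual induction on the three derivations $\ldtypwf \codelevel \proglevel T l$, $\ldtyping i \proglevel t T l$ and $\ldsubst i \proglevel \delta \Delta$. These three judgments are defined to mirror the typing rules of \mltt at layers $\varlevel$ and $\codelevel$, but each rule carries, alongside its syntactic premises, the corresponding semantic bundle ($\ldStypwfge$, $\ldStypingge$, $\ldSsubstge$) for every sub-structure it introduces. Consequently the conclusion of each case is obtained by combining the semantic information already recorded in the premises with the induction hypotheses on the recursive positions, and then discharging the universal quantifications over weakenings $\alpha$, target layers $k \ge \proglevel$, and related local substitutions by the structural lemmas established for $\ldStypeqge$ and its companions, namely Weakenings, \Cref{lem:dt:lsubst-ge}, Local Weakenings, Semantic Conversion and PER.

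Concretely, I would dispatch the type cases first. For $\Nat$ and $\Ty l$ the goal $\ldStypwfge \proglevel \proglevel T l$ is immediate from the Type Constructors law of the generic equivalence together with the stability of well-formedness of the ambient local context under weakening and local substitution. For $\PI l{l'}x S T$ I apply the IHs to the semantic data for $S$ and for $T$ (the latter in the extended local context), assemble a $\Pi$-type member of the logical relation via the Type Constructors law, and propagate it through arbitrary $\delta$ and $\alpha$ with the Local Substitutions lemma. The case $\Elt l t$ is the subtle one: here I invoke the IH on the term judgment $\ldtyping \codelevel \proglevel t{\Ty l}{1 + l}$ to obtain $\ldStypingge \proglevel \proglevel t{\Ty l}{1 + l}$, unfold the universe clause of the logical relation (which is precisely where the well-founded recursion on universe levels is used) to extract that the decoded type of the reduct of $t$ lies in the relation at level $l$, and then apply weak head expansion.

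For the term cases the moves are analogous: variables and global variables give neutral members via the Congruence for Neutrals laws together with the Reflexivity of Neutral lemma; $\ze$ and $\su$ use the Congruence laws and the $\Nat$ clause; $\LAM{l}{l'}xSt$ and application use the $\Pi$-type clauses; the natural-number recursor uses the neutral-recursion congruence law when the scrutinee reduces to a neutral and the $\beta$-reduction plus weak head expansion otherwise, with the motive substitution $M[t/x]$ handled through \Cref{lem:dt:lsubst-ge}; and the type-encoding terms reuse the work from the type cases. Throughout, the parameter layer $i$ is only $\varlevel$ or $\codelevel$ so no computation happens there, whereas the target layer $k \ge \proglevel$ is where the relation lives; the Lifting lemma (\Cref{lem:dt:lift}) bridges the two whenever a sub-derivation recorded at $\codelevel$ must be viewed at $\proglevel$ or above. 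For $\ldsubst i \proglevel \delta \Delta$, the base cases $\cdot^k$, $\cdot^k_g$ and $\wk^k_g$ reduce to reflexivity of local weakenings in the logical relation, and the cons case $\delta, t/x$ follows from the IH on $\delta$ and the IH on $t$, glued by the definition of $\ldSsubeqge$.

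The main obstacle I anticipate is the bookkeeping around the $\tEl$/universe-encoding cases and the recursor: these are the only places where the argument is not a direct transcription of a recorded premise, since decoding an encoded type forces a descent along the well-founded order on universe levels and a reconciliation of the reduction behaviour of the encoding with the inductive clause of the relation, while the recursor forces motive substitutions to be pushed through the semantic judgments. Everything else is routine propagation of the stored semantic data through the closure lemmas.
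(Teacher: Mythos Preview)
You have misread the definitions. Look again at the rules for $\ldtypwf \codelevel \proglevel T l$, $\ldtyping i \proglevel t T l$ and $\ldsubst i \proglevel \delta \Delta$: every single rule carries as a premise the semantic bundle for the \emph{entire conclusion}, not merely for its sub-structures. For instance, the $\Pi$-rule for types has $\ldStypwfge \proglevel \proglevel{\PI{l_1}{l_2} x S T}{l}$ as a premise, the $\lambda$-rule for terms has $\ldStypingge \proglevel \proglevel{\LAM{l_1}{l_2} x S t}{T'}l$ as a premise, and so on. This is precisely what the paper means by ``their definitions are designed to imply semantic information.'' The lemma is therefore immediate: perform a case analysis on the last rule of the derivation and read off the required premise. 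No induction hypotheses, no Type Constructors laws, no reconstruction of the logical relation is needed.

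Your elaborate plan---assembling the $\Pi$-case from the IHs on $S$ and $T$ via the Type Constructors law, decoding $\tEl$ through the universe clause and weak-head expansion, handling the recursor by case analysis on the scrutinee---is the kind of work one does when \emph{constructing} a derivation of $\ldtypwf \codelevel \proglevel T l$ (and indeed this work reappears later in the fundamental theorems). But the lemma here goes the other way: it simply extracts what has already been stored. The point of these inductive judgments is exactly to cache that semantic information so that consumers (the global-substitution relation and the contextual-type clauses of the layer-$\metalevel$ logical relation) can retrieve it by a single inversion rather than by replaying a fundamental-theorem argument.
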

Further we let
\begin{align*}
  \ldtypeq \codelevel \proglevel T{T'} l &:= \ldtypwf \codelevel \proglevel T l \tand T = T' \\
  \ldtyequivt i \proglevel t{t'} T l & := \ldtyping i \proglevel t T l \tand t = t' \\
  \ldsubeq i \proglevel \delta{\delta'}{\Delta} &:= \ldsubst i \proglevel \delta{\Delta} \tand \delta = \delta'
\end{align*}

We first consider the judgment for types:
\begin{mathpar}
  \inferrule
  {\tyequiv[L]l 0\Level \\ \ldStypwfge \proglevel \proglevel \Nat l}
  {\ldtypwf \codelevel \proglevel \Nat l}

  \inferrule
  {\tyequiv[L]l{1 + l'}\Level \\ \ldStypwfge \proglevel \proglevel{\Ty{l'}}{l}}
  {\ldtypwf \codelevel \proglevel{\Ty{l'}}l}

  \inferrule
  {\tyequiv[L]l{l_1 \sqcup l_2}\Level \\ \typing[L]{l_1}\Level \\ \typing[L]{l_2}\Level \\
    \ldtypwf \codelevel \proglevel S{l_1} \\ \ldtypwf[\Psi][\Gamma, x : S \at{l_1}] \codelevel \proglevel T{l_2} \\
    \ldStypwfge \proglevel \proglevel{{\PI{l_1}{l_2} x S T}} l}
  {\ldtypwf \codelevel \proglevel{\PI{l_1}{l_2} x S T}{l}}

  \inferrule
  {\tyequiv[L]l{l'}\Level \\  U : \DTyp[\Delta]{\codelevel}{l'} \in \Psi \\ \ldsubst \codelevel \proglevel \delta{\Delta} \\ \ldStypwfge \proglevel \proglevel{U^\delta}{l}}
  {\ldtypwf \codelevel \proglevel{U^\delta}{l}}
  
  \inferrule
  {\tyequiv[L]{l'}l\Level \\
    \ldtyping \codelevel \proglevel t{\Ty{l'}}{1 + l} \\
    \ldStypwfge \proglevel \proglevel{\Elt{l'} t}{l}}
  {\ldtypwf \codelevel \proglevel{\Elt{l'} t}{l}}
\end{mathpar}
The judgment not only keeps track of the syntactic structure of types but also the
semantic information for both layers $\proglevel$ and $\metalevel$. %
The semantic information for both layers is critical to enable code running, when we
refer to the code of types at layer $\proglevel$ and $\metalevel$, resp.. %

The judgment for local substitutions is simple, which simply generalizes that for
terms:
\begin{mathpar}
  \inferrule
  {\ldSsubstge \proglevel \proglevel{\cdot^k}{\cdot}}
  {\ldsubst i \proglevel {\cdot^k}{\cdot}}

  \inferrule
  {\ldSsubstge \proglevel \proglevel{\cdot_g^k}{\cdot}}
  {\ldsubst i \proglevel {\cdot_g^k}{\cdot}}

  \inferrule
  {\ldSsubstge \proglevel \proglevel{\wk_g^k}{g}}
  {\ldsubst i \proglevel {\wk_g^k}{g}}

  \inferrule
  {\ldsubst i \proglevel {\delta}{\Delta} \\ \ldtyping i \proglevel t{T[\delta]}{l} \\
    \ldSsubstge \proglevel \proglevel{\delta, t/x}{\Delta, x : T \at l}}
  {\ldsubst i \proglevel {\delta, t/x}{\Delta, x : T \at l}}
\end{mathpar}

We also keep track of both syntactic and semantic information of terms. %
We give the definition as follows (the parameter $i$ is restricted to be in $\{\varlevel, \codelevel\}$
when used only for the rules below):
\begin{mathpar}
  \inferrule
  {x : T \at{l'} \in \Gamma \\
    \tyequiv[L]l{l'}\Level \\
    \ldStypeqge \proglevel \proglevel{T'}{T}l \\
    \ldStypingge \proglevel \proglevel x{T'}l}
  {\ldtyping i \proglevel x{T'}{l}}

  \inferrule
  {u : \DTrm[\Delta]{i'}{T}{l'} \in \Psi \\ i' \in \{\varlevel, \codelevel\} \\ i' \le i \\
    \ldsubst i \proglevel \delta{\Delta} \\
    \tyequiv[L]l{l'}\Level \\
    \ldStypeqge \proglevel \proglevel{T'}{T[\delta]}l \\
    \ldStypingge \proglevel \proglevel{u^\delta}{T'}l}
  {\ldtyping i \proglevel{u^\delta}{T'}l}

  \inferrule
  {\tyequiv[L]l{2+l'}\Level \\
    \ldStypeqge \proglevel \proglevel{T}{\Ty{1 + l'}}l \\
    \ldStypingge \proglevel \proglevel{\Ty{l'}}T l}
  {\ldtyping \codelevel \proglevel{\Ty{l'}}{T}{l}}
  
  \inferrule
  {\tyequiv[L]l{1}\Level \\
    \ldStypeqge \proglevel \proglevel{T}{\Ty 0}l \\
    \ldStypingge \proglevel \proglevel{\Nat}T l}
  {\ldtyping \codelevel \proglevel \Nat T{l}}

  \inferrule
  {\tyequiv[L]l{0}\Level \\
    \ldStypeqge \proglevel \proglevel{T}{\Nat}l \\
    \ldStypingge \proglevel \proglevel{\ze}T l} 
  {\ldtyping \codelevel \proglevel \ze T l}

  \inferrule
  {\ldtyping \codelevel \proglevel t \Nat 0 \\
   \tyequiv[L]l{0}\Level \\
   \ldStypeqge \proglevel \proglevel{T}{\Nat}l \\
   \ldStypingge \proglevel \proglevel{\su t}T l}
  {\ldtyping \codelevel \proglevel {\su t}T l}

  \inferrule
  {\ldtypwf[\Psi][\Gamma, x : \Nat \at 0] \codelevel \proglevel M l \\
    \ldtyping \codelevel \proglevel s {M[\ze/x]}l \\
    \ldtyping[\Psi][\Gamma, x : \Nat \at 0, y : M \at l] \codelevel \proglevel {s'}{M[\su x/x]}l \\
    \ldtyping \codelevel \proglevel t \Nat \ze \\
    \tyequiv[L]{l'}l\Level \\
    \ldStypeqge \proglevel \proglevel{T}{M[t/x]}{l'} \\
    \ldStypingge \proglevel \proglevel{\ELIMN l{x.M}s{x,y. s'}t}T{l'}}
  {\ldtyping \codelevel \proglevel{\ELIMN l{x.M}s{x,y. s'}t}{T}{l'}}

  \inferrule
  {\typing[L]{l_1}\Level  \\ \typing[L]{l_2}\Level \\
    \ldtyping \codelevel \proglevel s{\Ty{l_1}}{1 + l_1} \\ \ldtyping[\Psi][\Gamma, x : \Elt{l_1} S \at {l_1}] \codelevel \proglevel t{\Ty{l_2}}{1 + l_2} \\
    \tyequiv[L]l{\su{(l_1 \sqcup l_2)}}\Level \\
    \ldStypeqge \proglevel \proglevel{T}{\Ty{l_1 \sqcup l_2}}l \\
    \ldStypingge \proglevel \proglevel{\PI{l_1}{l_2} x s t}T l}
  {\ldtyping \codelevel \proglevel{\PI{l_1}{l_2} x s t}{T}{l}}

  \inferrule
  {\typing[L]{l_1}\Level  \\ \typing[L]{l_2}\Level \\ \ldtypwf \codelevel \proglevel S{l_1} \\
    \ldtyping[\Psi][\Gamma, x : S \at{l_1}] \codelevel \proglevel t{T}{l_2} \\
    \tyequiv[L]l{l_1 \sqcup l_2}\Level \\
    \ldStypeqge \proglevel \proglevel{T'}{\PI{l_1}{l_2} x S T}l \\
    \ldStypingge \proglevel \proglevel{\LAM {l_1}{l_2} x S t}{T'}l}
  {\ldtyping \codelevel \proglevel{\LAM {l_1}{l_2} x S t}{T'}{l}}

  \inferrule
  {\typing[L]{l_1}\Level  \\ \ldtypwf \codelevel \proglevel S{l_1} \\
    \ldtypwf[\Psi][\Gamma, x : S \at{l_1}] \codelevel \proglevel {T}{l_2} \\
    \ldtyping \codelevel \proglevel t{\PI {l_1}{l_2} x S T}{l_1 \sqcup l_2} \\ \ldtyping \codelevel \proglevel s S{l_1} \\
    \tyequiv[L]l{l_2}\Level \\
    \ldStypeqge \proglevel \proglevel{T'}{T[s/x]}l \\
    \ldStypingge \proglevel \proglevel{\APP t{l_1}{l_2} x S T s}{T'}l}
  {\ldtyping \codelevel \proglevel{\APP t{l_1}{l_2} x S T s}{T'}{l}}
\end{mathpar}
These rules strictly decrease on the structures of types, terms and local
substitutions. %
The purpose of these rules are to remember all computational and syntactic information
of all sub-structures. %
Notice that the annotated types do not necessarily match up with the precise types
which terms might have. %
For example, $\ze$ is quantified to have seemingly some arbitrary type $T$. %
This, however, is not an accurate understanding. %
In fact, $T$ is quantified by $\ldStypingge \proglevel \proglevel{\ze}T l$, so if we provide a proof of
$\ldStypeqge \proglevel \proglevel{T}{T'}l$, then we obtain $\ldStypingge \proglevel \proglevel{\ze}{T'} l$ by
irrelevance. %
More specifically,
\begin{lemma}[Conversion]
  If $\ldtyping \codelevel \proglevel t T l$ and $\ldStypeqge \proglevel \proglevel{T}{T'}l$, then $\ldtyping \codelevel \proglevel t{T'} l$.
\end{lemma}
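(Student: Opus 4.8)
The plan is to argue by a simple case analysis on the rule used to derive $\ldtyping \codelevel \proglevel t T l$; no genuine induction is needed, because conversion only touches the top-level type annotation and never descends into sub-structures. The key observation is that every introduction rule for $\ldtyping \codelevel \proglevel t T l$ threads the conclusion type $T$ through exactly two of its premises: a \emph{semantic equality} premise of the shape $\ldStypeqge \proglevel \proglevel{T}{T_0}l$, where $T_0$ is the canonical type forced by the head constructor of $t$ (e.g. $\Nat$ for $\ze$ and $\su t$, $\Ty{1 + l'}$ for $\Ty{l'}$, $\PI{l_1}{l_2}x S T$ for $\LAM{l_1}{l_2}x S t$, $T[s/x]$ for $\APP t{l_1}{l_2}x S T s$, $M[t/x]$ for $\telimn$, and so on), together with the \emph{semantic typing} premise $\ldStypingge \proglevel \proglevel t T l$. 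Every other premise of the rule either does not mention $T$ at all, or constrains only the fixed types of proper sub-structures of $t$, and is therefore carried over unchanged.

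Concretely, once the last rule has been fixed and its premises extracted, I would proceed in three short steps. First, from the hypothesis $\ldStypeqge \proglevel \proglevel{T}{T'}l$ and the extracted premise $\ldStypeqge \proglevel \proglevel{T}{T_0}l$, the PER law for $\ldStypeqge \proglevel \proglevel{-}{-}l$ (symmetry followed by transitivity) yields $\ldStypeqge \proglevel \proglevel{T'}{T_0}l$. Second, instantiating the Semantic Conversion lemma at $t' := t$, from $\ldStypingge \proglevel \proglevel t T l$ and $\ldStypeqge \proglevel \proglevel{T}{T'}l$ I obtain $\ldStypingge \proglevel \proglevel t{T'}l$. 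Third, I re-apply the very same rule, substituting these two updated judgments for the corresponding threaded premises and leaving all remaining premises verbatim; its conclusion is exactly $\ldtyping \codelevel \proglevel t{T'}l$, as desired.

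There is essentially no mathematical obstacle here: the judgment was deliberately designed so that the ``real'' type of a term is quantified through $\ldStypingge$ rather than syntactically pinned to the annotation, and this lemma is precisely the payoff of that design choice. The only real work is bookkeeping — checking that the threading pattern above holds uniformly across all the rules; the $\telimn$, $\Pi$-encoding, $\LAM$, and $\APP$ rules carry the most premises, but each still conforms. A minor point worth recording is that, although $\ldtyping \codelevel \proglevel t T l$ is defined mutually with $\ldtypwf \codelevel \proglevel T l$ and $\ldsubst i \proglevel \delta \Delta$, a single case analysis suffices and no simultaneous induction is required; an entirely analogous conversion statement for $\ldtyequivt i \proglevel t{t'} T l$ then follows immediately by unfolding its definition.
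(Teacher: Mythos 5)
Your proof is correct: each rule for $\ldtyping \codelevel \proglevel t T l$ indeed threads the conclusion type only through the two premises $\ldStypeqge \proglevel \proglevel{T}{T_0}l$ and $\ldStypingge \proglevel \proglevel t T l$, and the combination of the PER lemma (symmetry plus transitivity) with the Semantic Conversion lemma on $\ldStyequivge$ (instantiated at $t' := t$) is exactly what discharges them, while every remaining premise constrains only fixed types of sub-structures and is reused verbatim. The paper states this lemma without giving a proof, but your case analysis is precisely the argument the rule design anticipates.
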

The rules are designed in this way so that there is a semantic structure which the
recursors can recurse on. %
Moreover, these rules are closed under weakenings and local substitutions, which is particularly
important for use of global variables.
\begin{lemma}[Weakenings] $ $
  \begin{itemize}
  \item If $\ldtypwf \codelevel \proglevel T l$ and $\alpha :: L' \sep \Phi \To L \sep \Psi$,
    then $\ldtypwf[\Phi][\Gamma][L'] \codelevel \proglevel T l$.
  \item If $i \in \{\varlevel, \codelevel\}$, $\ldtyping i \proglevel t T l$ and $\alpha :: L' \sep \Phi \To L \sep \Psi$,
    then $\ldtyping[\Phi][\Gamma][L'] i \proglevel t T l$.
  \item If $i \in \{\varlevel, \codelevel\}$, $\ldsubst i \proglevel \delta\Delta$ and
    $\alpha :: L' \sep \Phi \To L \sep \Psi$, then
    $\ldsubst[\Phi][\Gamma][L'] i \proglevel \delta\Delta$.
  \end{itemize}
\end{lemma}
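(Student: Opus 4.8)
The plan is to prove all three statements simultaneously by a mutual induction on the derivations of $\ldtypwf\codelevel\proglevel T l$, $\ldtyping i \proglevel t T l$ and $\ldsubst i \proglevel \delta\Delta$. This works because every rule defining these judgments has premises of only a handful of shapes: (i) sub-derivations of the same three judgments; (ii) universe-level well-formedness $\typing[L]{l'}\Level$ and universe-level equivalences $\tyequiv[L]{l_1}{l_2}\Level$; (iii) context memberships such as $x : T \at{l'} \in \Gamma$, $U : \DTyp[\Delta]\codelevel{l'} \in \Psi$ and $u : \DTrm[\Delta]{i'}{T}{l'} \in \Psi$, together with side conditions like $i' \le i$ and $i' \in \{\varlevel,\codelevel\}$; (iv) the semantic premises $\ldStypwfge\proglevel\proglevel T l$, $\ldStypingge\proglevel\proglevel t T l$ and $\ldSsubstge\proglevel\proglevel\delta\Delta$, which already assert stability under weakenings and local substitutions; and (v) syntactic equalities between (substituted) types or terms, as in the conversion-style premises. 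For each rule I would apply the induction hypothesis to the premises of shape (i); push the premises of shape (ii) along the universe-weakening component $\theta$ of $\alpha = (\theta,\gamma)$ using the substitution lemma for level equivalences, a weakening being a special case of a universe substitution; transport the memberships of shape (iii) along $\alpha$ by the standard compatibility of context lookup with the weakening action, noting that $i' \le i$ and $i' \in \{\varlevel,\codelevel\}$ are untouched; discharge the semantic premises of shape (iv) directly by the Weakenings lemma established in \Cref{sec:dt:semglob}; and settle the shape-(v) equalities by applying the same weakening action to both sides and commuting it past the substitutions via \Cref{lem:dt:wkact,lem:dt:interact}. Reassembling each rule from the weakened premises then yields the weakened conclusion.

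The cases that need a little care are the global-variable rules for $U^\delta$ and $u^\delta$, and the compound term rules (the encodings of $\Pi$, $\lambda$-abstraction, application, and $\telimn$) whose annotated or result types are themselves substitutions. For $u^\delta$ the rule records $\ldStypeqge\proglevel\proglevel{T'}{T[\delta]}l$, where $T$ comes from $u : \DTrm[\Delta]{i'}{T}{l'} \in \Psi$; after applying $\alpha$ the third induction hypothesis supplies the weakened $\ldsubst i \proglevel \delta\Delta$, the binding transports to the corresponding one in $\Phi$, and the only thing to verify is that the weakening of $T[\delta]$ agrees with applying the weakened $\delta$ to the weakened $T$, which is exactly \Cref{lem:dt:interact} together with \Cref{lem:dt:wkact} for the identity local substitutions hidden inside $\delta$ and the global bindings. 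The same bookkeeping, applied to $T[s/x]$ in the application rule and to $M[t/x]$, $M[\ze/x]$ and $M[\su x/x]$ in the $\telimn$ rule, reconciles the weakened annotated and result types with the forms the rules demand. The variable rule is the base case, discharged by transporting $x : T' \at{l'} \in \Gamma$ and weakening its semantic premises; the $\ldsubst$ base cases $\cdot^k$, $\cdot^k_g$ and $\wk^k_g$ each carry only a single semantic premise, so they follow immediately from the Weakenings lemma of \Cref{sec:dt:semglob}.

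The main obstacle is administrative rather than conceptual: keeping the three weakening actions — $\theta$ on universe variables, $\gamma$ on global variables, and the induced action on the local context $\Gamma$ — consistently threaded through the substituted types that appear in the conversion-style premises, and checking that \Cref{lem:dt:wkact,lem:dt:interact} apply in precisely the shapes needed, in particular that each $\id_\Delta$ occurring inside a global binding weakens to the identity substitution of the weakened $\Delta$, which is \Cref{lem:dt:wkact}. Once those equalities are in hand, every rule reassembles verbatim; since the layer parameter $i$, and the restriction $i \in \{\varlevel,\codelevel\}$ for the term and substitution judgments, are merely carried along and play no active role in any rule, the layering introduces no further subtlety here.
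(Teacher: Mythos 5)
Your proposal is correct, and it fills in what the paper leaves unstated (the paper gives this lemma without proof, treating it as routine). Your approach — mutual induction on the three judgments, discharging the semantic $\ge$-indexed premises by the earlier Weakenings lemma, and reconciling the substitution-shaped annotated/result types via \Cref{lem:dt:wkact} and \Cref{lem:dt:interact} — is exactly the intended argument.
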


\begin{lemma}[Local Substitutions]\labeledit{lem:dt:lsubst-cp} $ $
  \begin{itemize}
  \item If $\ldtypwf \codelevel \proglevel T l$ and $\ldsubst[\Psi][\Delta] \codelevel \proglevel \delta\Gamma$,
    then $\ldtypwf[\Psi][\Delta] \codelevel \proglevel {T[\delta]} l$.
  \item If $i \in \{\varlevel, \codelevel\}$, $\ldtyping i \proglevel t T l$ and $\ldsubst[\Psi][\Delta] i \proglevel \delta\Gamma$,
    then $\ldtyping[\Psi][\Delta] i \proglevel {t[\delta]}{T[\delta]} l$.
  \item If $i \in \{\varlevel, \codelevel\}$, $\ldsubst i \proglevel \delta\Delta$ and
    $\ldsubst[\Psi][\Delta'] i \proglevel{\delta'}\Gamma$, then
    $\ldsubst[\Phi][\Delta'] i \proglevel{\delta \circ \delta'}\Delta$.
  \end{itemize}
\end{lemma}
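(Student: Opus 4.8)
The plan is to prove all three statements simultaneously, by a mutual induction on the derivations of $\ldtypwf \codelevel \proglevel T l$, $\ldtyping i \proglevel t T l$ and $\ldsubst i \proglevel \delta\Delta$, mirroring the induction used for \Cref{lem:dt:lsubst-ge}. In each case the substituted object $T[\delta]$, $t[\delta]$, $\delta \circ \delta'$ has the same head constructor, so the recipe is uniform: apply the induction hypotheses to the syntactic sub-derivations, re-apply the matching rule of these judgments in the substituted context, and discharge the side conditions each rule carries. The syntactic-equality clauses folded into the definitions (the ``$T = T'$'' in $\ldtypeq \codelevel \proglevel T{T'} l := \ldtypwf \codelevel \proglevel T l \tand T = T'$, and likewise for terms and substitutions) are immediate because substitution is a function. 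The semantic $\ge$-obligations ($\ldStypwfge$, $\ldStypingge$, $\ldSsubstge$ built into each rule) are supplied by \Cref{lem:dt:lsubst-ge}: first turn the hypothesis $\ldsubst[\Psi][\Delta] i \proglevel \delta\Gamma$ into its stable form $\ldSsubstge[\Psi][\Delta] \proglevel \proglevel \delta\Gamma$ via \Cref{lem:dt:ldc-pm}, then feed it, together with the $\ge$-content of the sub-derivations (again from \Cref{lem:dt:ldc-pm} or the IHs), through the three parts of \Cref{lem:dt:lsubst-ge}.

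Two families of cases require genuine work. The binder cases ($\Pi$-types and their encodings, $\lambda$-abstractions, and the recursor $\ELIMN l{x.M}s{x,y.s'}t$) substitute under a binder with $q(\delta) = (\delta, x/x)$, so I would first show $q(\delta)$ is again in $\ldsubst \codelevel \proglevel$: this means re-typing $\delta$ over the extended domain and adjoining the fresh variable, whose entry $\ldtyping \codelevel \proglevel x{S[\delta]} l$ comes from the variable rule --- its $\ldStypingge$ side condition holding because $x[\delta_1] = \delta_1(x)$ is reducible whenever $\delta_1$ is a reducible substitution. Alongside this I would use the composition law for local substitutions so that the type produced by the re-applied rule matches the claimed answer, e.g.\ $(T[s/x])[\delta] = T[q(\delta)][s[\delta]/x]$ and $(M[t/x])[\delta] = M[q(\delta)][t[\delta]/x]$, with the corresponding $\ge$-components again massaged by \Cref{lem:dt:lsubst-ge}. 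The other nontrivial family is the global-variable cases $U^{\delta_0}$ and $u^{\delta_0}$: here $U^{\delta_0}[\delta] = U^{\delta_0 \circ \delta}$, so the third clause of the mutual induction is invoked on the sub-derivation carrying $\delta_0$ to produce $\ldsubst \codelevel \proglevel{\delta_0 \circ \delta}{\Delta_0}$, and the bookkeeping for the annotated type is handled by $T[\delta_0][\delta] = T[\delta_0 \circ \delta]$ together with \Cref{lem:dt:lsubst-ge}.

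The main obstacle is precisely the binder cases. Because $\ldsubst \codelevel \proglevel$ records the \emph{full} syntactic and semantic data of every component of the substitution, extending it across a binder is not free: it needs a local-weakening stability property for these three judgments (to move $\delta$ to the longer domain context) and a reflexivity-of-neutrals argument for the adjoined variable, neither of which is among the lemmas already established --- they are the $\ldsubst$-counterparts of the ``Local Weakenings'' and reflexive-identity lemmas that accompany $\ldSsubeqge$, and would need to be proved alongside (or folded into) this induction. Once these are in place, combined with the composition/naturality rewrites in the elimination forms, every case closes by the same pattern as \Cref{lem:dt:lsubst-ge}, so no new conceptual ingredient is involved.
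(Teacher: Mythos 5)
Your proposal is on the right track and reproduces the paper's strategy: mutual induction on the three judgments, with the purely syntactic side of each rule matching up under substitution (since substitution is a function) and the semantic $\ge$-obligations handed off to \Cref{lem:dt:lsubst-ge} after converting the hypothesis via \Cref{lem:dt:ldc-pm}. The paper's own proof is exactly this, stated in two sentences.

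The obstacle you flag in the binder cases is real and is precisely what the paper's brevity sweeps under the rug. To apply the IH to the sub-derivation for $T$ in $\PI{l}{l'}x{S}{T}$ (and similarly for $\lambda$, $\telimn$, and the $\Pi$-encoding), you need $\ldsubst[\Psi][\Delta, x : S[\delta] \at{l_1}] \codelevel \proglevel{(\delta,x/x)}{\Gamma, x:S \at{l_1}}$, and the $\delta$-prefix of that must be retyped over the extended codomain $\Delta, x : S[\delta]$; since $\ldsubst$'s base cases record the length of the codomain local context, $\delta$ is not literally invariant under this weakening but must be replaced by $\delta \circ \wk^1_\Delta$. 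So you need both (a) that $\wk^1_\Delta$ itself satisfies $\ldsubst \codelevel \proglevel$, i.e.\ the Local Weakenings lemma that the paper only states \emph{after} \Cref{lem:dt:lsubst-cp} and leaves unproved, and (b) the composition clause you are currently trying to establish. Fortunately this is not actually circular: (a) can be proved before \Cref{lem:dt:lsubst-cp} by a direct induction on the weakened prefix, since each step of $\wk^m_{\Gamma,x:T\at l} = \wk^{1+m}_\Gamma, x/x$ only needs the variable rule of $\ldtyping \codelevel \proglevel$, whose $\ldStypingge$ obligation follows from the congruence-for-neutrals law together with reflexivity-of-neutrals for the $\proglevel$-relation (both of which the paper establishes in \Cref{sec:dt:logrel-prop} before this point). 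So your suggestion to prove the $\ldsubst$-local-weakening alongside is one correct patch; proving it first, as a standalone lemma in the same form as its $\ldSsubeqge$-counterpart, is a cleaner one and avoids enlarging the mutual induction. Either way you have correctly diagnosed the gap in the paper's exposition.

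One small precision: you write that the reflexivity-of-neutrals ingredient is ``not among the lemmas already established.'' It is --- it is stated as Reflexivity of Neutral for $\ldtypeq i \proglevel$ --- what is missing is only the glue that packages it into the variable rule of $\ldtyping \codelevel \proglevel$, which is a one-line unfolding once the local-weakening stub is in place.
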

\begin{proof}
  We proceed by mutual induction. %
  The syntax is already closed under local substitutions. %
  The semantics is also closed under local substitutions by \Cref{lem:dt:lsubst-ge}. 
\end{proof}

We can lift from layer $\varlevel$ to layer $\codelevel$.
\begin{lemma}[Lifting] $ $
  \begin{itemize}
  \item If $\ldtyping \varlevel \proglevel t T l$, then $\ldtyping \codelevel \proglevel t T l$.
  \item If $\ldsubst \varlevel \proglevel \delta\Delta$, then $\ldsubst \codelevel \proglevel \delta\Delta$.
  \end{itemize}
\end{lemma}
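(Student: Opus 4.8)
The plan is to prove both statements simultaneously by mutual induction on the derivations of $\ldtyping \varlevel \proglevel t T l$ and $\ldsubst \varlevel \proglevel \delta \Delta$. The essential observation is that, among the rules defining $\ldtyping i \proglevel t T l$, only the rule for local variables and the rule for global variables $u^\delta$ are stated with a free layer parameter $i$; every remaining rule (those for $\Nat$, $\Ty{l'}$, $\ze$, $\tsucc$, $\telimn$, $\Pi$-encodings, $\lambda$-abstractions, and applications) is fixed at layer $\codelevel$ and hence cannot occur in a derivation at layer $\varlevel$. Dually, in the rules for $\ldsubst i \proglevel \delta \Delta$ the parameter $i$ appears only in the cons case, via its recursive premises. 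So the induction reduces to essentially three cases.

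For the local variable rule, the premises $x : T \at{l'} \in \Gamma$, $\tyequiv[L]l{l'}\Level$, $\ldStypeqge \proglevel \proglevel{T'}{T}l$, and $\ldStypingge \proglevel \proglevel x{T'}l$ make no mention of the layer parameter $i$ at all --- recall that $\typeof \varlevel = \typeof \codelevel = \proglevel$, and that the $\ge\proglevel$-indexed semantic conditions quantify over all $k \ge \proglevel$ independently of the outer layer --- so we simply re-apply the same rule with $i = \codelevel$. For the global variable rule we have $u : \DTrm[\Delta]{i'}{T}{l'} \in \Psi$ with $i' \in \{\varlevel, \codelevel\}$ and $i' \le \varlevel$; since $\varlevel$ is the least layer in the order $\varlevel < \codelevel < \proglevel < \metalevel$, this forces $i' = \varlevel$, and in particular $i' \le \codelevel$. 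Applying the induction hypothesis for substitutions to the premise $\ldsubst \varlevel \proglevel \delta \Delta$ gives $\ldsubst \codelevel \proglevel \delta \Delta$; the remaining premises are again layer-independent, so re-applying the rule at $i = \codelevel$ yields $\ldtyping \codelevel \proglevel{u^\delta}{T'}l$. Finally, for $\ldsubst \varlevel \proglevel \delta \Delta$ the three base cases ($\cdot^k$, $\cdot_g^k$, $\wk_g^k$) have premises referring only to a $\ge\proglevel$-indexed semantic condition that does not mention $i$, so they transfer verbatim; and in the cons case $\ldsubst \varlevel \proglevel{\delta, t/x}{\Delta, x : T \at l}$ we apply the induction hypotheses to $\ldsubst \varlevel \proglevel \delta \Delta$ and to $\ldtyping \varlevel \proglevel t{T[\delta]}l$ and re-apply the rule, its last premise being unaffected by the layer.

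This argument is pure bookkeeping and presents no genuine obstacle; the two points to verify carefully, both immediate once noticed, are: (i) that every semantic side condition occurring in these rules is always of the $\ge\proglevel$-indexed form and hence genuinely independent of the outer layer index $i$, so that it carries over from the layer-$\varlevel$ instance to the layer-$\codelevel$ instance unchanged; and (ii) that the hypothesis $i' \le \varlevel$ in the global variable rule collapses to $i' = \varlevel$ by minimality of $\varlevel$, which is exactly what supplies the side condition $i' \le \codelevel$ needed to re-apply that rule at layer $\codelevel$.
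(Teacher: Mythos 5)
Your proof is correct. The paper states this lemma without a proof, and your argument by mutual induction — observing that among the rules for $\ldtyping i \proglevel t T l$ only the local-variable and global-variable cases carry a free layer index, that the semantic side conditions $\ldStypeqge{\proglevel}{\proglevel}{\cdot}{\cdot}{\cdot}$ and $\ldStypingge{\proglevel}{\proglevel}{\cdot}{\cdot}{\cdot}$ quantify over $k \ge \proglevel$ and so are insensitive to the outer index $i$, and that $i' \le \varlevel$ forces $i' = \varlevel \le \codelevel$ by minimality of $\varlevel$ — is exactly the argument the lemma calls for.
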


\begin{lemma}[Local Weakenings]
  If $\lpequiv \proglevel{\Delta,\Gamma}{\Delta',\Gamma}$ and $\ldSctxeqge \proglevel
  \proglevel{\Delta}{\Delta'}$, then \break
  $\ldsubst[\Psi][\Delta,\Gamma] \codelevel \proglevel{\wk^{|\Gamma|}_{\Delta}}\Delta$.
\end{lemma}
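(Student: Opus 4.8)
The plan is to induct on the structure of the codomain local context $\Delta$, building the derivation of $\ldsubst[\Psi][\Delta,\Gamma]\codelevel\proglevel{\wk^{|\Gamma|}_{\Delta}}\Delta$ by the three introduction rules of that inductive judgment. The key observation is that the semantic content is already in hand: each rule for this judgment carries a side condition $\ldSsubstge\proglevel\proglevel{\delta}{\Delta}$, and every such condition is discharged by the earlier \emph{Local Weakenings} lemma of \Cref{sec:dt:semglob} --- the one about the semantic well-formedness of the weakening substitutions $\wk^{|\Gamma|}_{\Delta}$, not the present lemma --- whose two hypotheses are exactly $\lpequiv\proglevel{\Delta,\Gamma}{\Delta',\Gamma}$ and $\ldSctxeqge\proglevel\proglevel\Delta{\Delta'}$ (and, in recursive calls, their restrictions to a prefix of $\Delta$). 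So the task reduces to assembling the inductive skeleton and, in the cons case, supplying the semantic data for the newly added variable.

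The two base cases are immediate: for $\Delta = \cdot$, $\wk^{|\Gamma|}_{\cdot} = \cdot^{|\Gamma|}$ and $\cdot,\Gamma$ ends with $\cdot$, so the rule for $\cdot^k$ fires; for $\Delta = g$, $\wk^{|\Gamma|}_{g} = \wk_g^{|\Gamma|}$, $g : \Ctx \in \Psi$ (by presupposition on $\lpequiv\proglevel{g,\Gamma}{\Delta',\Gamma}$), and $g,\Gamma$ ends with $g$, so the rule for $\wk_g^k$ fires; in both the $\ldSsubstge$ premise comes from the earlier lemma. In the cons case $\Delta = \Delta_0, x : T \at l$ we have $\wk^{|\Gamma|}_{\Delta} = \wk^{|\Gamma'|}_{\Delta_0}, x/x$ with $\Gamma' := (x : T \at l, \Gamma)$ and $\Delta_0,\Gamma' = \Delta,\Gamma$, and we apply the cons rule. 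Its three premises are met as follows. (i) $\ldsubst[\Psi][\Delta,\Gamma]\codelevel\proglevel{\wk^{|\Gamma'|}_{\Delta_0}}{\Delta_0}$ comes from the induction hypothesis on $\Delta_0$ with suffix $\Gamma'$; to apply it I peel $\Gamma$ off both sides of the $\lpequiv$ hypothesis and invert the context-extension rule to get $\Delta' = \Delta_0', x : T' \at{l'}$ with $\lpequiv\proglevel{\Delta_0}{\Delta_0'}$, instantiate $\ldSctxeqge\proglevel\proglevel\Delta{\Delta'}$ at the identity weakening and invert it to extract $\ldSctxeqge\proglevel\proglevel{\Delta_0}{\Delta_0'}$ (re-closing under weakenings with the \emph{Weakening} lemma for the logical relation on local contexts), and then pass through the \textbf{PER} laws to obtain the reflexive instances $\lpequiv\proglevel{\Delta,\Gamma}{\Delta,\Gamma}$ and $\ldSctxeqge\proglevel\proglevel{\Delta_0}{\Delta_0}$ that feed the IH. (ii) $\ldtyping[\Psi][\Delta,\Gamma]\codelevel\proglevel x{T[\wk^{|\Gamma'|}_{\Delta_0}]}l$ is proved by the variable rule of the tracking judgment: $x : T \at l \in \Delta,\Gamma$ with recorded type and level unchanged (since $\Gamma$ is appended after $\Delta$), $\tyequiv[L]ll\Level$ by reflexivity, and the semantic premises $\ldStypeqge\proglevel\proglevel{T[\wk^{|\Gamma'|}_{\Delta_0}]}Tl$ and $\ldStypingge\proglevel\proglevel x{T[\wk^{|\Gamma'|}_{\Delta_0}]}l$. (iii) $\ldSsubstge\proglevel\proglevel{\wk^{|\Gamma|}_{\Delta}}\Delta$ is again the earlier lemma. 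Combining (i)--(iii) via the cons rule closes the induction.

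The hard part is step (ii). One first has to recognize $T[\wk^{|\Gamma'|}_{\Delta_0}]$ --- the effect of acting the weakening substitution $\wk^{|\Gamma'|}_{\Delta_0}$ on the type $T$ that $x$ carries in $\Delta_0$ --- as exactly the weakening of $T$ demanded by the variable rule, which is where \Cref{lem:dt:wkact} (and its identity corollary) and the substitution algebra of \Cref{lem:dt:interact} are used, with the residual mismatch of universe-level annotations absorbed by the \emph{Conversion} lemma stated for these tracking judgments. The first semantic premise then follows from \Cref{lem:dt:lsubst-ge}: $\wk^{|\Gamma'|}_{\Delta_0}$ is a semantically well-formed weakening substitution by the earlier Local Weakenings lemma, so applying \Cref{lem:dt:lsubst-ge} to the semantic well-formedness of $T$ in $\Delta_0$ (obtained by inverting $\ldSctxeqge\proglevel\proglevel\Delta{\Delta'}$ at its last binding) yields $\ldStypeqge\proglevel\proglevel{T[\wk^{|\Gamma'|}_{\Delta_0}]}Tl$; the second, $\ldStypingge\proglevel\proglevel x{T[\wk^{|\Gamma'|}_{\Delta_0}]}l$, is read off from the structure of related local substitutions --- under any related pair the images of $x$ are related at $T[\wk^{|\Gamma'|}_{\Delta_0}\circ\delta]$, which equals the type recorded for $x$ after substitution by the projection identity for composing a weakening substitution with a substitution --- combined with irrelevance. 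All of this mirrors the reflexivity-of-weakening bookkeeping already carried out for the simply-typed fragment in \Cref{sec:cv}.
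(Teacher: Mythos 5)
Your proposal is sound, and since the paper leaves this lemma without a proof, there is no official argument to compare against; what you have assembled is the natural route. The structure (induction on $\Delta$, applying the three introduction rules of the $\ldsubst$ judgment, discharging the $\ldSsubstge$ side conditions at every node via the earlier semantic Local Weakenings lemma) is exactly what a reader would reconstruct. The one genuinely delicate step you correctly isolate is the cons case, where the variable premise $\ldtyping[\Psi][\Delta,\Gamma]\codelevel\proglevel x{T[\wk^{1+|\Gamma|}_{\Delta_0}]}l$ must be constructed: your decomposition into (a) recognizing $T[\wk^{1+|\Gamma|}_{\Delta_0}]$ as the looked-up de~Bruijn-weakened type via the algebra lemmas, (b) obtaining $\ldStypeqge\proglevel\proglevel{T[\wk^{1+|\Gamma|}_{\Delta_0}]}{T}l$ from the type component of the inverted $\ldSctxeqge$ derivation pushed through \Cref{lem:dt:lsubst-ge}, and (c) reading $\ldStypingge$ directly off the step case of related local substitutions using $\wk^{|\Gamma'|}_{\Delta_0}\circ\delta = \delta_0$ plus irrelevance, is correct.

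Two minor caveats worth stating explicitly. First, in obtaining the premises for the induction hypothesis on $\Delta_0$, you pass to the reflexive instances $\lpequiv\proglevel{\Delta,\Gamma}{\Delta,\Gamma}$ and $\ldSctxeqge\proglevel\proglevel{\Delta_0}{\Delta_0}$ rather than propagating the original $\Delta'$; this is the right move because the IH only requires \emph{some} witness on the right, but be careful that what the IH then yields is exactly $\ldsubst[\Psi][\Delta,\Gamma]\codelevel\proglevel{\wk^{1+|\Gamma|}_{\Delta_0}}{\Delta_0}$ (with the left context unchanged since $\Delta_0,\Gamma' = \Delta,\Gamma$), which is indeed what the cons rule needs. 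Second, when you "invert $\ldSctxeqge$" to extract the prefix relation, remember that the inversion gives $\ldStypeq\F k\proglevel{T}{T'}l$ for each $k \ge \proglevel$ separately; to repackage this as the single $\ldStypeqge\proglevel\proglevel{T}{T'}l$ needed in (b), you must quantify over $k$ across the family of inversions, not apply the logical-relation Weakening lemma (which only handles the $\alpha$-quantification). Neither caveat changes the structure of your argument, but spelling them out would eliminate the only places where a Agda-minded reader might stumble.
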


In the semantic rules for global contexts and global substitutions, we use the
auxiliary definitions above. %
Similar to the counterparts for local contexts and local substitutions, the rules are
defined in an induction-recursion. %
The key idea is to make sure the relation is invariant under universe weakenings. %
We also impose invariance under global substitutions for related local contexts, types
and terms. %
We are going to define the following definitions:
\begin{itemize}
\item $\D :: \ldSgctxeq \Psi \Phi$ denotes the semantic related global contexts.
\item $\ldSgsubeq \sigma{\sigma'} \D$ denotes the semantic relation between the global
  substitutions $\sigma$ and $\sigma'$. %
  The two definitions above are defined inductive-recursively. %
  
\item $\F :: \ltSctxwf \E \Gamma$ given $\E :: \forall~\theta :: L' \To L ~.~ \ldSgctx
  \Phi$ denotes $\Gamma$ is semantically well-formedness and is stable under universe
  weakening and related global substitutions. 
\item $\ltStypwf \F \proglevel \proglevel T l$ denotes that $T$ is semantically well-formedness and is
  stable under universe weakening and related global substitutions.
\end{itemize}

We now move on to the actual definitions.
\begin{mathpar}
  \D :: \inferrule
  { }
  {\ldSgctxeq \cdot \cdot}
\end{mathpar}
$\ldSgsubeq \sigma{\sigma'} \D$ is defined as $\judge[L]\Psi$ and $\sigma = \sigma' = \cdot$.

\begin{mathpar}
  \D :: \inferrule
  {\E :: \forall~\theta :: L' \To L ~.~ \ldSgctxeq[L']\Phi{\Phi'}}
  {\ldSgctxeq{\Phi, g : \Ctx}{\Phi', g : \Ctx}}
\end{mathpar}
$\ldSgsubeq \sigma{\sigma'} \D$ is defined as
\begin{itemize}
\item $\sigma = \sigma_1, \Gamma /g$ and $\sigma' = \sigma_1', \Gamma' /g$, 
\item $\forall~\theta :: L' \To L ~.~
  \ldSgsubeq[\Psi][L']{\sigma_1}{\sigma_1'}{\E(\theta)}$, and
\item $\ldSctxeqge \proglevel \proglevel \Gamma{\Gamma'}$.
\end{itemize}

\begin{mathpar}
  \D :: \inferrule
  {\E :: \forall~\theta :: L' \To L ~.~ \ldSgctxeq[L']\Phi{\Phi'} \\
    \ltSctxeq \E \Gamma{\Gamma'} \\ i \in \{\codelevel, \proglevel\}}
  {\ldSgctxeq{\Phi, U : \DTyp i l}{\Phi', U : \DTyp[\Gamma'] i l}}
\end{mathpar}
$\ldSgsubeq \sigma{\sigma'} \D$ is defined as
\begin{itemize}
\item $\sigma = \sigma_1, T/U$ and $\sigma' = \sigma_1', T'/U$,
\item $\forall~\theta :: L' \To L ~.~
  \ldSgsubeq[\Psi][L']{\sigma_1}{\sigma_1'}{\E(\theta)}$,
\item We analyze $i$.
  \begin{itemize}
  \item If $i = \codelevel$, then $T = T'$, $\ldtypwf \codelevel \proglevel T l$ and
    $\ldtypwf[\Psi][\Gamma'] \codelevel \proglevel T l$.
  \item Otherwise, $i = \proglevel$, then $\ldStypeqge \proglevel \proglevel T{T'} l$ and $\ldStypeqge[\Psi][\Gamma'] \proglevel \proglevel T{T'} l$.
  \end{itemize}
\end{itemize}
where $\ltSctxeq \E \Gamma \Delta$ is defined as
\[
  \forall~~\theta :: L' \To L \tand \ldSgsubeq[\Psi][L'] \sigma{\sigma'}{\E(\theta)} \tand k \in
  \{\proglevel, \metalevel\}~.~
  \ldctxeq[\Psi][L'] k \proglevel{\Gamma[\sigma]}{\Delta[\sigma']}
\]

\begin{mathpar}
  \D :: \inferrule
  {\E :: \forall~\theta :: L' \To L ~.~ \ldSgctxeq[L']\Phi{\Phi'} \\
    \F :: \ltSctxeq \E \Gamma{\Gamma'} \\
    \Ac :: \ltStypeq \F \proglevel \proglevel T{T'} l \\
    i \in \{\varlevel, \codelevel\}}
  {\ldSgctxeq{\Phi, u : \DTrm i T l}{\Phi', u : \DTrm[\Gamma'] i{T'} l}}
\end{mathpar}
$\ldSgsubeq \sigma{\sigma'} \D$ is defined as
\begin{itemize}
\item $\sigma = \sigma_1, t/u$ and $\sigma' = \sigma_1', t'/u$,
\item $\forall~\theta :: L' \To L ~.~
  \ldSgsubeq[\Psi][L']{\sigma_1}{\sigma_1'}{\E(\theta)}$,
\item Since we know $i \in \{\varlevel, \codelevel\}$, we have $t = t'$,
  $\ldtyping[\Psi][\Gamma[\sigma_1]] i \proglevel t{T[\sigma_1]} l$ and \break
  $\ldtyping[\Psi][\Gamma'[\sigma_1']] i \proglevel t{T'[\sigma_1']} l$. %
  In an extension where $i = \proglevel$ is possible, then we use \break
  $\ldStyequivge[\Psi][\Gamma[\sigma_1]] \proglevel \proglevel t{t'}{T[\sigma_1]} l$ and 
  $\ldStyequivge[\Psi][\Gamma'[\sigma_1']] \proglevel \proglevel t{t'}{T'[\sigma_1']} l$.
\end{itemize}
where $\ltStypeq \F \proglevel \proglevel T{T'} l$ is defined as given
\begin{itemize}
\item $\theta :: L' \To L$, 
\item $\Bc :: \ldSgsubeq[\Psi][L'] \sigma{\sigma'}{\E(\theta)}$,
\item $k \in \{\proglevel, \metalevel\}$, and
\item $\ldsubeq[\Psi][\Delta][L'] k \proglevel \delta{\delta'}{\F(\theta, \Bc, k)}$,
\end{itemize}
we have
\[
  \ldtypeq[\Psi][\Delta][L'] k \proglevel{T[\sigma][\delta]}{T'[\sigma'][\delta']} l
\]

Now we consider the properties of the new definitions.
\begin{lemma}[Weakening] $ $
  \begin{itemize}
  \item If $\D :: \ldSgctxeq \Phi{\Phi'}$ and $\theta :: L' \To L$, then
    $\D' :: \ldSgctxeq[L'] \Phi{\Phi'}$.
  \item If $\ldSgsubeq \sigma{\sigma'} \D$ and $\theta :: L' \To L$, then
    $\ldSgsubeq[\Psi][L'] \sigma{\sigma'}{\D'}$. 
  \end{itemize}
\end{lemma}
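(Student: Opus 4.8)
The plan is to prove both clauses simultaneously by induction on the derivation $\D :: \ldSgctxeq\Phi{\Phi'}$, exploiting the fact that the inductive--recursive definitions of semantically related global contexts and global substitutions (\Cref{sec:dt:semglob}) were deliberately arranged so that every step case bundles a field that is already universally quantified over all universe weakenings into $L$. In the base case $\Phi = \Phi' = \cdot$ the only content of $\D$ is $\judge[L]\Psi$, and $\judge[L']\Psi$ follows by the universe-substitution lemma applied to $\theta$ --- a universe weakening being in particular a well-formed universe substitution, under which $\Psi$, $\Phi$ and $\Phi'$ are left unchanged since weakening merely adjoins fresh universe variables. The matching clause for related global substitutions then just asserts $\sigma = \sigma' = \cdot$, which is preserved. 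So the real work lies in the three cons cases.

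In each cons case $\D$ carries a tail field of shape $\E :: \forall~\theta_0 :: L_0 \To L ~.~ \ldSgctxeq[L_0]{\Phi_1}{\Phi_1'}$; additionally the $U$- and $u$-cases carry a derivation of $\ltSctxeq\E\Gamma{\Gamma'}$ for the stored local context, and the $u$-case a further field asserting that the stored types $T$, $T'$ are related stably under all further universe weakenings (through $\E$), related global substitutions, computable layers, and related local substitutions. To build the weakened derivation $\D' :: \ldSgctxeq[L']\Phi{\Phi'}$ I would reapply the same rule with tail field $\E' := \lambda~(\theta_0' :: L_0' \To L')~.~\E(\theta \circ \theta_0')$, where $\theta \circ \theta_0' :: L_0' \To L$ is the composite universe weakening; associativity and the unit laws for composition of universe substitutions (\Cref{sec:dt:syn}) make $\E'$ well formed and make the dependent stability fields re-typecheck against $\E'$, so these too are produced by precomposition with $\theta$. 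The concrete type or term that instantiates a global variable --- which does not appear in $\D$ itself but in the clause for related global substitutions --- together with the data witnessing the stored local context ($\ldSctxeqge\proglevel\proglevel\Gamma{\Gamma'}$), the stored type (either $T = T'$ with $\ldtypwf\codelevel\proglevel T l$ when $i = \codelevel$, or a stability-under-substitution proof when $i = \proglevel$), and the stored term (either $t = t'$ with $\ldtyping\codelevel\proglevel t{T[\sigma_1]}l$ when $i \in \{\varlevel,\codelevel\}$, or a stability proof otherwise), are all transported from $L$ to $L'$ by their own, already-established weakening lemmas (the ``Weakenings'' lemmas for the ``$\mathsf{ge}$'' judgments and for the inductive code judgments of \Cref{sec:dt:semglob}). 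The second clause is discharged in the same stroke: in each cons case it unfolds to a universe-weakening-quantified part, which we precompose with $\theta$ to reach $\E'$, plus head data of the ``$\mathsf{ge}$''/code shape weakened as above, after which the instance demanded of $\D'$ is, by the unit law $\theta \circ \id = \theta$, literally the instance already in hand.

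The only genuine obstacle I anticipate is coherence bookkeeping rather than anything conceptual. Because semantically related global contexts, the judgment $\ltSctxeq\E\Gamma{\Gamma'}$, the stored-type stability judgment, and related global substitutions are all defined by one nested induction--recursion, re-defining the tail field of $\D'$ forces a check that the dependent fields --- for instance the derivation of $\ltSctxeq{\E'}{\Gamma}{\Gamma'}$ --- still typecheck, which drives the precomposition-by-$\theta$ inward through several stacked quantifiers: ``for all further universe weakening'', ``for all related global substitution along $\E$'', ``for all computable layer $k$'', and ``for all related local substitution''. Each reduction goes through by associativity and the unit laws of universe-weakening composition, together with the weakening lemmas for the ``$\mathsf{ge}$'' judgments and for the code judgments $\ldtyping\codelevel\proglevel t T l$ and $\ldsubeq\proglevel\proglevel\delta{\delta'}\Delta$ that are already in place, so no new ingredient is needed and the proof is a disciplined unfolding. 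As a sanity check, this lemma is the exact analogue, one level up, of the ``Weakening'' lemmas already established for the ``$\mathsf{ge}$'' judgments (\Cref{sec:dt:semglob}) and, a level below, for related local contexts and local substitutions (\Cref{sec:dt:logrel-prop}), and it is proved in the same way.
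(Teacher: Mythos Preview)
Your proposal is correct and follows the same approach as the paper: induction on $\D$, exploiting that every cons case already quantifies its tail field $\E$ (and the dependent stability fields) over all universe weakenings into $L$, so that $\D'$ is obtained by precomposing with $\theta$. One minor slip: in the base case the derivation $\D$ itself carries no premise at all---the requirement $\judge[L]\Psi$ belongs to the recursive clause $\ldSgsubeq\sigma{\sigma'}\D$, not to $\D$---but this does not affect the argument.
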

\begin{proof}
  Immediate by induction on $\D$. We simply use the composition of universe weakenings. 
\end{proof}

\begin{lemma}[Reflexive Global Weakening]
  If $\judge[L]{\Phi,\Psi}$ and $\D :: \ldSgctxeq \Phi{\Phi'}$, then \break
  $\ldSgsubeq[\Phi,\Psi]{\wk_{\Phi}^{|\Psi|}}{\wk_{\Phi}^{|\Psi|}}{\D}$. 
\end{lemma}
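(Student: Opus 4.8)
The plan is to prove this by induction on the derivation $\D :: \ldSgctxeq \Phi{\Phi'}$, following the same recipe as the "Validity of Global Weakening Substitutions" lemma of \Cref{sec:cv}. This is the natural decomposition because $\ldSgsubeq \sigma{\sigma'}\D$ is defined by recursion on $\D$, and the global weakening substitution $\wk^{|\Psi|}_{\Phi}$ is defined by recursion on $\Phi$ in lockstep with it. In the base case $\Phi = \Phi' = \cdot$ we have $\wk^{|\Psi|}_{\cdot} = \cdot$, and unfolding the definition, $\ldSgsubeq[\cdot,\Psi]{\cdot}{\cdot}{\D}$ asks only for $\judge[L]{\Psi}$, which is exactly the hypothesis $\judge[L]{\Phi,\Psi}$.

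For a step case, write $\Phi = \Phi_1, B$ for the peeled-off binding $B$, so that $\D$ carries a universe-indexed family of subderivations $\E(\theta) :: \ldSgctxeq[L']{\Phi_1}{\Phi_1'}$, and unfolding $\wk$ gives $\wk^{|\Psi|}_{\Phi_1, B} = \wk^{1+|\Psi|}_{\Phi_1}, h$, where $h$ is $g/g$, $U^{\id_\Gamma}/U$, or $u^{\id_\Gamma}/u$ according to the shape of $B$ (and $|B,\Psi| = 1+|\Psi|$). I would first discharge the "tail" obligation: apply the induction hypothesis with $\Psi$ replaced by $B,\Psi$ and $\D' := \E(\id)$ — using $\judge[L]{\Phi_1,(B,\Psi)}$, which follows by inversion from $\judge[L]{\Phi_1,B,\Psi}$ — to obtain $\ldSgsubeq[\Phi_1,B,\Psi]{\wk^{1+|\Psi|}_{\Phi_1}}{\wk^{1+|\Psi|}_{\Phi_1}}{\E(\id)}$, and then feed this through the (just-established) Weakening lemma for $\ldSgsubeq$ together with the identity $\wk^k_{\Phi_1}[\theta] = \wk^k_{\Phi_1[\theta]}$ (the universe-substitution counterpart of \Cref{lem:dt:wkact}) to get it for every $\theta :: L' \To L$. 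It then remains to verify that the head entry $h$ is appropriately related. When $B = g : \Ctx$, this is $\ldSctxeqge \proglevel\proglevel g g$, which holds because $g : \Ctx$ stays in scope in $\Phi_1, g : \Ctx, \Psi$ and this survives every global and universe weakening, so the base clause of the Kripke relation for contexts applies. When $B = U : \DTyp[\Gamma] i l$ I split on $i$: for $i = \codelevel$ I need the tracking judgment $\ldtypwf \codelevel\proglevel {U^{\id_\Gamma}} l$ in the extended context, built from its defining rule using $U : \DTyp[\Gamma]\codelevel l$ (now in scope, suitably weakened), the reducibility of the local identity substitution in the tracking judgment, and the semantic component $\ldStypwfge \proglevel\proglevel{U^{\id_\Gamma}}l$ supplied by \Cref{lem:dt:ldc-pm}; for $i = \proglevel$ I need $\ldStypeqge \proglevel\proglevel{U^{\id_\Gamma}}{U^{\id_\Gamma}}l$, which after $\id_\Gamma \circ \delta = \delta$ reduces to relating $U^\delta$ with $U^{\delta'}$ for related $\delta,\delta'$ — available from the "Neutral Types" law of the generic equivalence (after escaping $\ldsubeq$-relatedness to generic equivalence of local substitutions) plugged into the neutral-type clause of the logical relation. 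The case $B = u : \DTrm[\Gamma] i T l$ is analogous, using the "Congruence for Neutrals" law and reflexivity of neutrals for terms instead. All three subcases also invoke \Cref{lem:dt:refl-lid} and the weakening/local-substitution closure lemmas for the tracking judgments.

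The main obstacle I expect is exactly the bookkeeping around $\id_\Gamma$ and the stacked weakenings in the global-variable cases, which is also where \Cref{sec:cv}'s analogue "required a bit of preliminaries": the local context $\Gamma$ attached to a global variable sits below the inserted block $\Psi$, so its occurrences and the $\id_\Gamma$ in the head entry must be globally weakened by $|\Psi|+1$, and the resulting equalities (the identity-weakening laws, compatibility of these weakenings with both the tracking judgments and the logical relations) must be cleared before the reflexivity-of-neutrals arguments can be applied. This is mechanical given \Cref{lem:dt:wkact}, \Cref{lem:dt:interact}, the algebra of local substitutions, and the escape/weakening lemmas, but it carries the technical weight of the proof; everything else is a routine unfolding of the inductive-recursive definition of $\ldSgsubeq$.
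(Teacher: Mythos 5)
Your proposal is essentially correct and follows the same route as the paper: induction on $\D$, with the three non-trivial binding cases handled by (i) observing $g$ survives global weakening for $g : \Ctx$, (ii) the Neutral Types law plus the tracking judgment (split on $i \in \{\codelevel,\proglevel\}$) for $U : \DTyp i l$, and (iii) the Congruence-for-Neutrals law plus reflexivity of neutrals for $u : \DTrm i T l$, all after escaping related local substitutions to generic equivalence via \Cref{lem:dt:ldc-pm}. The one small difference is bureaucratic: the paper discharges the $\forall\theta$ quantifier by applying the IH directly to $\E(\theta)$ at the universe context $L'$, whereas you propose applying the IH only at $\E(\id)$ and then transporting the result along $\theta$ via the Weakening lemma; the latter works but additionally needs right irrelevance to replace the derivation produced by weakening with $\E(\theta)$, which you do not mention explicitly (the paper's direct instantiation avoids this step).
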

\begin{proof}
  We do induction on $\D$.
  \begin{itemize}[label=Case]
  \item
    \begin{mathpar}
      \D :: \inferrule
      {\E :: \forall~\theta :: L' \To L ~.~ \ldSgctxeq[L']{\Phi_1}{\Phi_1'}}
      {\ldSgctxeq{\Phi_1, g : \Ctx}{\Phi_1', g : \Ctx}}
    \end{mathpar}
    We have given $\theta :: L' \To L$,
    \begin{align*}
      & \ldSgsubeq[\Phi_1,g:\Ctx,\Psi]{\wk_{\Phi_1}^{1 + |\Psi|}}{\wk_{\Phi_1}^{1 +
        |\Psi|}}{\E(\theta)}
        \byIH \\
      & \alpha :: L'' \sep \Psi' \To L' \sep \Phi_1,g:\Ctx,\Psi \\
      & k \in \{\proglevel, \metalevel\}
        \tag{by assumption} \\
      & \ldctxeq[\Psi'][L''] k \proglevel{g}{g}
        \tag{by definition}
    \end{align*}
    In the last line, we know $g$ exits in $\Psi'$ because $\alpha$ is a weakening. %
    Hence we have this case.
  \item
    \begin{mathpar}
      \D :: \inferrule
      {\E :: \forall~\theta :: L' \To L ~.~ \ldSgctxeq[L']{\Phi_1}{\Phi_1'} \\
        \ltSctxeq \E \Gamma{\Gamma'} \\ i \in \{\codelevel, \proglevel\}}
      {\ldSgctxeq{\Phi_1, U : \DTyp i l}{\Phi_1', U : \DTyp[\Gamma'] i l}}
    \end{mathpar}
    \begin{itemize}[label=Subcase]
    \item If $i = \proglevel$, then given $\theta :: L' \To L$ and $k \in \{\proglevel, \metalevel\}$, then we
      further assume \break
      $\alpha :: L'' \sep \Psi' \To L' \sep \Phi_1,U : \DTyp i l,\Psi$ and
      $\ldsubeq[\Psi'][\Delta][L''] k \proglevel \delta{\delta'}{\Gamma}$, we should prove
      \[
        \ldtypeq[\Psi'][\Delta][L''] k \proglevel{U^\delta}{U^{\delta'}} l
      \]
      There is a symmetric proof for $\Gamma'$ which we omit here. %
      We proceed as follows:
      \begin{align*}
        & \ltsubgeq[\Psi'][\Delta][L''] k \delta{\delta'}{\Gamma}
          \tag{by escape, \Cref{lem:dt:ldc-pm}} \\
        & \lttypgneeq[\Psi'][\Delta][L''] k{U^\delta}{U^{\delta'}} l
          \tag{by the neutral types law} \\
        & \ldtypeq[\Psi'][\Delta][L''] k \proglevel{U^\delta}{U^{\delta'}} l
          \tag{by the neutral type case}
      \end{align*}
      
    \item If $i = \codelevel$, then we know $\ldsubst[\Psi] \codelevel \proglevel{\id}\Gamma$. %
      Combining the previous case, we have the goal.
    \end{itemize}
  \item
    \begin{mathpar}
      \D :: \inferrule
      {\E :: \forall~\theta :: L' \To L ~.~ \ldSgctxeq[L']{\Phi_1}{\Phi_1'} \\
        \F :: \ltSctxeq \E \Gamma{\Gamma'} \\
        \Ac :: \ltStypeq \F \proglevel \proglevel T{T'} l \\
        i \in \{\varlevel, \codelevel\}}
      {\ldSgctxeq{\Phi_1, u : \DTrm i T l}{\Phi_1', u : \DTrm[\Gamma'] i T' l}}
    \end{mathpar}
    We first have
    \[
      \Bc :: \ldSgsubeq[\Phi_1,u : \DTrm i T l,\Psi]{\wk_{\Phi_1}^{1 + |\Psi|}}{\wk_{\Phi_1}^{1 +
          |\Psi|}}{\E(\theta)}
    \]
    by IH.

    Then without loss of generality, we should prove
    \[
      \ldtyping[\Phi_1,u : \DTrm i T l,\Psi] \codelevel \proglevel{u^\id}T l
    \]
    which in turn requires
    \[
      \ldStypingge[\Phi_1,u : \DTrm i T l,\Psi] \proglevel \proglevel{u^\id}T l
    \]
    Given $\theta :: L' \To L$ and $k \ge i$,  $\alpha :: L'' \sep \Psi' \To L'
    \sep \Phi_1,g:\Ctx,\Psi$ and
    $\Cc :: \ldsubeq[\Psi'][\Delta][L''] k \proglevel \delta{\delta'}{\Gamma}$, we should
    prove
    \[
      \ldtyequivt[\Psi'][\Delta][L''] k \proglevel{u^\delta}{u^{\delta'}}{T[\delta]} l
    \]
    We proceed as follows:
    \begin{align*}
      & \ltsubgeq[\Psi'][\Delta][L''] k \delta{\delta'}{\Gamma}
        \tag{by escape, \Cref{lem:dt:ldc-pm}} \\
      & \ldtypeq[\Psi'][\Delta][L''] k \proglevel{T[\wk_{\Phi_1}^{1 +
        |\Psi|}][\delta]}{T[\wk_{\Phi_1}^{1 + |\Psi|}][\delta']}l
        \tag{as $\Ac(\theta, \Bc, k, \Cc)$} \\
      \Cc' ::~& \ldtypeq[\Psi'][\Delta][L''] k \proglevel{T[\delta]}{T[\delta']}l
                \tag{omit weakening}\\
      & \lttrmgneeq[\Psi'][\Delta][L''] k{u^\delta}{u^{\delta'}}{T[\delta]} l
        \tag{by the congruence for neutrals law} \\
      & \ldtyequiv[\Psi'][\Delta][L''] k \proglevel{u^\delta}{u^{\delta'}}{\Cc'}
        \tag{by reflexivity of neutral}
    \end{align*}
  \end{itemize}
\end{proof}

\begin{corollary}[Reflexive Global Identity]
  If $\judge[L]{\Phi}$ and $\D :: \ldSgctx \Phi$, then
  $\ldSgsubeq[\Phi]{\id_{\Phi}}{\id_{\Phi}}{\D}$. 
\end{corollary}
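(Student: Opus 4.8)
The plan is to obtain this corollary as an immediate specialization of the preceding Reflexive Global Weakening lemma, in exact analogy with how \Cref{lem:dt:refl-lid} (reflexivity of local identity substitutions) was deduced from reflexivity of local weakening substitutions. The starting observation is that the identity global substitution is, by definition, a degenerate global weakening substitution: $\id_{\Phi} := \wk^0_{\Phi}$. Hence the statement to be proved is literally the $k = 0$, empty-extension instance of the Reflexive Global Weakening lemma, and nothing new has to be argued.

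Concretely, I would instantiate that lemma by taking its extension context to be empty, $\Psi := \cdot$, so that $|\Psi| = 0$, $\Phi, \cdot = \Phi$, and $\wk^{|\Psi|}_{\Phi} = \wk^0_{\Phi} = \id_{\Phi}$. Under this instantiation the hypothesis $\judge[L]{\Phi, \Psi}$ collapses to the given $\judge[L]{\Phi}$, and---unfolding the diagonal abbreviation $\ldSgctx \Phi := \ldSgctxeq \Phi \Phi$ and choosing $\Phi' := \Phi$---the hypothesis $\D :: \ldSgctxeq \Phi{\Phi'}$ becomes precisely the given $\D :: \ldSgctx \Phi$, so the \emph{same} derivation $\D$ can be fed in. The conclusion of the lemma then reads $\ldSgsubeq[\Phi]{\wk^0_{\Phi}}{\wk^0_{\Phi}}{\D}$, which is $\ldSgsubeq[\Phi]{\id_{\Phi}}{\id_{\Phi}}{\D}$ after the same definitional unfolding.

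There is no genuine obstacle here: all of the substantive work---the induction on $\D$, the appeals to the escape lemma and \Cref{lem:dt:ldc-pm}, and the use of the neutral-types and congruence-for-neutrals laws together with reflexivity of neutrals in the three step cases---has already been carried out inside the proof of the Reflexive Global Weakening lemma, which we assume. The only point deserving a moment's care is purely bookkeeping: verifying that the off-by-one weakening counters appearing in the step cases of that lemma (e.g.\ $\wk^{1 + |\Psi|}_{\Phi_1}$) collapse correctly to $\wk^0$ at the base when $\Psi$ is empty. This holds by the very definitions of $\wk^k_{\Phi}$ and of $\id_{\Phi}$, so the corollary follows by a single definitional unfolding followed by the specialization.
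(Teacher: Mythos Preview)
Your proposal is correct and matches the paper's approach: the corollary is stated without proof immediately after the Reflexive Global Weakening lemma, and your specialization (take the extension context empty so that $\wk^0_\Phi = \id_\Phi$, and instantiate $\Phi' := \Phi$ so that $\D :: \ldSgctxeq \Phi\Phi$ coincides with $\D :: \ldSgctx \Phi$) is exactly the intended one-line justification.
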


\begin{lemma}[Escape] $ $
  \begin{itemize}
  \item If $\D :: \ldSgctxeq \Phi{\Phi'}$, then $\judge[L]\Phi$ and $\judge[L]\Phi'$.
  \item If $\ldSgsubeq{\sigma}{\sigma'}{\D}$, then $\ptyequiv{\sigma}{\sigma'}\Phi$
    and $\ptyequiv{\sigma}{\sigma'}{\Phi'}$. 
  \end{itemize}
\end{lemma}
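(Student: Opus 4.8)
The plan is to prove both clauses simultaneously by induction on the derivation $\D :: \ldSgctxeq \Phi{\Phi'}$, mirroring the induction-recursion that defines $\ldSgctxeq$ together with $\ldSgsubeq$. The base case $\D :: \ldSgctxeq \cdot\cdot$ is immediate: $\judge[L]\cdot$ holds by the formation rule, and in the recursive clause $\sigma = \sigma' = \cdot$ with $\judge[L]\Psi$ already at hand, so $\ptyequiv{\cdot}{\cdot}{\cdot}$ holds as well. In each of the three extension cases the premise carries a family $\E :: \forall~\theta :: L' \To L ~.~ \ldSgctxeq[L']\Phi{\Phi'}$; I first apply the induction hypothesis to the strictly smaller derivation $\E(\id)$, obtaining $\judge[L]\Phi$, $\judge[L]\Phi'$, and --- in the recursive clause, where $\sigma$ and $\sigma'$ each split into prefixes $\sigma_1, \sigma_1'$ followed by one further entry --- the equivalences $\ptyequiv{\sigma_1}{\sigma_1'}{\Phi}$ and $\ptyequiv{\sigma_1}{\sigma_1'}{\Phi'}$.

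With $\judge[L]\Phi$ available, the key step is to manufacture a reflexively related global identity substitution: the Reflexive Global Weakening lemma specialised to an empty extension gives $\ldSgsubeq[\Phi]{\id}{\id}{\E(\id)}$. Combining this with a reflexively related local identity substitution from \Cref{lem:dt:refl-lid} (and the reflexive-local-weakening lemmas for the $j = \proglevel$ definitions), I can instantiate every stability-under-substitution wrapper recorded in the new binding --- e.g.\ $\ltSctxeq \E \Gamma{\Gamma'}$ for the local context of a $U$- or $u$-binding, $\ltStypeq \F \proglevel \proglevel T{T'} l$ for the type of a $u$-binding, and, in the recursive clause, $\ldSctxeqge \proglevel \proglevel \Gamma{\Gamma'}$ and $\ldStypeqge \proglevel \proglevel T{T'} l$ --- at these reflexive arguments and at $k = \proglevel$. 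This collapses each wrapper to a bare Kripke relation at $j = \proglevel$, namely $\ldctxeq \proglevel \proglevel \Gamma{\Gamma'}$ or $\ldtypeq \proglevel \proglevel T{T'} l$. Applying the escape lemmas of \Cref{sec:dt:logrel-prop} for related local contexts and related types, then the subsumption law of the generic equivalence, and finally the syntactic presupposition lemma \Cref{lem:dt:presup1}, yields $\lpjudge \proglevel \Gamma$, $\lttypwf \proglevel T l$ and $\lttypeq \proglevel T{T'} l$, with the $\Phi'$-side facts emerging from the same presuppositions (which return both sides of an equivalence). When the stored datum is purely static ($i = \codelevel$), I first pass through \Cref{lem:dt:ldc-pm} to turn the syntactic-semantic judgments $\ldtypwf \codelevel \proglevel T l$ and $\ldtyping \codelevel \proglevel t T l$ into the forms $\ldStypwfge \proglevel \proglevel T l$ and $\ldStypingge \proglevel \proglevel t T l$ before collapsing as above, reading off the required $\typing[L]l\Level$ from the universe-level side conditions carried by those judgments. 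Feeding all of this into the corresponding global-context formation rule gives $\judge[L]{\Phi, B}$ and $\judge[L]{\Phi', B'}$ for the new bindings, and into the matching clause of the global-substitution equivalence judgment gives $\ptyequiv{\sigma}{\sigma'}{\Phi, B}$ and $\ptyequiv{\sigma}{\sigma'}{\Phi', B'}$, which closes the induction.

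The main obstacle I expect is orchestration rather than depth: threading mutually consistent reflexive universe, global, and local substitutions through the several nested ``stable under substitutions'' wrappers so that one genuinely lands in the plain $j = \proglevel$ logical relations that the escape lemmas consume, and reconciling the fact that $\ldSgctxeq \Phi{\Phi'}$ relates two equivalent but syntactically distinct contexts. In particular, escaping the well-formedness of the $\Phi'$-side data relies on presupposition of the local-context and type equivalence judgments returning both sides of an equivalence, together with the Global Context Equivalence lemma (and symmetry of the semantic relation on global contexts) to move conclusions from $\Phi$ to $\Phi'$ where necessary. I also need to keep the induction well-founded, but that is automatic: the reflexive-identity lemmas require $\judge[L]\Phi$, which is always supplied by the induction hypothesis applied to the strictly smaller sub-derivation $\E(\id)$, so no circularity arises.
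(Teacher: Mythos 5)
Your proposal is correct and follows essentially the same route as the paper's proof: induction on $\D$, applying the IH to $\E(\id)$; manufacturing reflexive global and local identity substitutions (via the Reflexive Global Weakening lemma at empty extension and \Cref{lem:dt:refl-lid}) to collapse the stability-under-substitution wrappers down to plain $j = \proglevel$ Kripke relations; then chaining escape, subsumption, and presupposition to recover the syntactic well-formedness needed by the global-context formation rules, with \Cref{lem:dt:ldc-pm} bridging the static-code case. You in fact fill in a detail the paper leaves implicit --- the need for context-equivalence lemmas and symmetry to transport the escaped facts from $\Gamma$/$\Phi$ to $\Gamma'$/$\Phi'$ so the $\Phi'$-side formation rule applies --- which is a genuine gap in the paper's terse account, correctly identified.
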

\begin{proof}
  We do induction on $\D$. We consider one case:
  \begin{mathpar}
    \D :: \inferrule
    {\E :: \forall~\theta :: L' \To L ~.~ \ldSgctxeq[L']{\Phi_1}{\Phi_1'} \\
      \F :: \ltSctxeq \E \Gamma{\Gamma'} \\
      \Ac :: \ltStypeq \F \proglevel \proglevel T{T'} l \\
      i \in \{\varlevel, \codelevel\}}
    {\ldSgctxeq{\Phi_1, u : \DTrm i T l}{\Phi_1', u : \DTrm[\Gamma'] i{T'} l}}
  \end{mathpar}
  We proceed as follows:
  \begin{align*}
    & \judge[L]{\Phi_1} \tand \judge[L]{\Phi_1'}
      \byIH \\
    & \ldctxeq \proglevel \proglevel \Gamma{\Gamma'}
      \tag{by $\F$} \\
    & \lpequiv \proglevel \Gamma{\Gamma'}
      \tag{applying escape to the previous line} \\
    & \ldtypwf \proglevel \proglevel T l
      \tag{by $\Ac$, using reflexive global and local identities} \\
    & \lttypwf \proglevel T l \tand \lttypwf \proglevel{T'} l
      \tag{by escape} \\
    & \judge[L]{\Phi_1, u : \DTrm i T l} \tand \judge[L]{\Phi_1', u : \DTrm[\Gamma'] i{T'} l}
      \tag{by definition}
  \end{align*}
  Then we consider the global substitutions. %
  We know $\sigma = \sigma_1, t/u$ and $\sigma' = \sigma_1', t'/u$ and $t = t'$. %
  Then by IH, we have
  \[
    \ptyequiv{\sigma_1}{\sigma_1'}{\Phi_1} \tand \ptyequiv{\sigma_1}{\sigma_1'}{\Phi_1'}
  \]
  Since we know $i \in \{\varlevel,\codelevel\}$, by passing in a local identity substitution, we have
  \[
    \ldtyequivt[\Psi][\Gamma[\sigma_1]] \proglevel \proglevel t{t'}{T[\sigma_1]} l
  \]
  We then have $\lttyequiv[\Psi][\Gamma[\sigma_1]]i t{t'}{T[\sigma_1]} l$ by escape. %
  We do it similarly for
  $\lttyequiv[\Psi][\Gamma'[\sigma_1]] \proglevel t{t'}{T'[\sigma_1]} l$.  By presupposition,
  unlifting and analysis using $i$, we can conclude the desired goal.
\end{proof}

\begin{lemma}[Symmetry] $ $
  \begin{itemize}
  \item If $\D :: \ldSgctxeq \Phi{\Phi'}$, then $\E :: \ldSgctxeq{\Phi'}{\Phi}$.
  \item If $\ldSgsubeq{\sigma}{\sigma'}{\D}$, then
  $\ldSgsubeq{\sigma'}{\sigma}{\E}$.
  \end{itemize}
\end{lemma}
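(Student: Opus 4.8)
The plan is to prove both statements simultaneously by induction on the derivation $\D :: \ldSgctxeq \Phi{\Phi'}$, following exactly the pattern of the earlier symmetry lemmas for related types, terms, local contexts and local substitutions. The two clauses must be established together because the inductive--recursive definition of $\ldSgctxeq$ refers, through the auxiliary judgments for $\Vdash$-stable local contexts and $\Vdash$-stable types, to the relation $\ldSgsubeq$ on global substitutions; hence swapping a related pair of global contexts forces one to swap the related global substitutions that index these auxiliary judgments, and conversely. So the induction hypothesis available at each node is "symmetry holds for the sub-derivation $\E$ of $\ldSgctxeq$ and for its associated $\ldSgsubeq$".

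The empty case is immediate ($\judge[L]\Psi$ and $\cdot=\cdot$ are self-symmetric). For the context-variable case the only premise is $\E :: \forall\,\theta :: L' \To L.\ \ldSgctxeq[L']\Phi{\Phi'}$; applying the induction hypothesis pointwise gives $\E' :: \forall\,\theta.\ \ldSgctxeq[L']{\Phi'}\Phi$, and the corresponding constructor yields $\ldSgctxeq{\Phi', g : \Ctx}{\Phi, g : \Ctx}$. The matching substitution clause is discharged by inspecting the forced shapes $\sigma=\sigma_1,\Gamma/g$, $\sigma'=\sigma_1',\Gamma'/g$, applying the induction hypothesis to $\sigma_1 \approx \sigma_1'$, and invoking the symmetry part of the PER lemma for the $\Vdash$-stable context relation on the tails $\Gamma,\Gamma'$. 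The two global-variable cases follow the same template but additionally require symmetrising the auxiliary derivations: for the binding $U : \DTyp i l$ we must produce the $\Vdash$-stable-context judgment for $(\Gamma',\Gamma)$ from the one for $(\Gamma,\Gamma')$, which we do by unfolding its definition, using the mutual induction hypothesis to turn any $\E'(\theta)$-related pair of global substitutions back into an $\E(\theta)$-related pair, and then applying the symmetry lemma for related local contexts (and, in the $i=\proglevel$ subcase, the PER lemma for $\Vdash$-stable types); for the binding $u : \DTrm i T l$ we analogously produce the $\Vdash$-stable-type judgment for $(T',T)$ from the one for $(T,T')$, by unfolding, feeding the quantified $\sigma'\approx\sigma$ and $\delta'\approx\delta$ through the mutual induction hypothesis and the symmetry lemma for related local substitutions to recover $\sigma\approx\sigma'$ and $\delta\approx\delta'$, invoking the given witness, and closing with the symmetry lemma for related types. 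In every case the recursive clause for $\ldSgsubeq$ is finished by pairing the induction hypothesis on the $\forall\theta$ component with the relevant symmetry/PER facts on the tail components ($t=t'$, $T=T'$, or $\Gamma/g$).

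I expect the main obstacle to be bookkeeping rather than mathematics: since the auxiliary judgments for $\Vdash$-stable local contexts and types are themselves indexed by derivations of the ``ambient'' relations $\ldSgctxeq$ and its context variant, producing their symmetric versions means carefully threading the symmetrised derivations $\E'$ and $\F'$ and, at every use site, converting an $\E'(\theta)$- (or $\F'(\cdots)$-) related pair back into an $\E(\theta)$- (or $\F(\cdots)$-) related pair via the mutual induction hypothesis before the original premise can be applied. One should also be ready to appeal to the irrelevance lemmas for related local contexts, types and substitutions, because the derivation obtained after symmetrising need not be the one literally demanded by the relevant constructor, and irrelevance lets us forget which derivation witnesses a relation. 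Beyond this threading, each step reduces to a previously established symmetry, PER, or irrelevance property, so no genuinely new argument is needed.
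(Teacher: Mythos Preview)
Your proposal is correct and matches the paper's (very terse) approach: induction on $\D$, appealing at each node to the already-established symmetry/PER of the underlying relations $\ldSctxeqge{\proglevel}{\proglevel}{\cdot}{\cdot}$, $\ldStypeqge{\proglevel}{\proglevel}{\cdot}{\cdot}{l}$, $\ldctxeq{k}{\proglevel}{\cdot}{\cdot}$, etc. One small technical point to tighten: the mutual induction hypothesis as you state it only gives the \emph{forward} direction ($\E(\theta)$-related $\Rightarrow$ $\E'(\theta)$-related), whereas to build $\ltSctxeq{\E'}{\Gamma'}{\Gamma}$ and $\ltStypeq{\F'}{\proglevel}{\proglevel}{T'}{T}{l}$ you need the reverse; the standard fix is to strengthen the second clause to a biconditional (proving both directions simultaneously), after which your threading argument goes through exactly as described, and irrelevance is not needed.
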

\begin{proof}
  We do induction on $\D$. %
  Apply symmetry for previous definitions. %
\end{proof}

\begin{lemma}[Right Irrelevance]
  If $\D :: \ldSgctxeq \Phi{\Phi'}$, $\E :: \ldSgctxeq \Phi{\Phi''}$ and
  $\ldSgsubeq{\sigma}{\sigma'}{\D}$, then $\ldSgsubeq{\sigma}{\sigma'}{\E}$.
\end{lemma}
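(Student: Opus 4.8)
The plan is to induct on the derivation $\D :: \ldSgctxeq \Phi{\Phi'}$ and invert $\E :: \ldSgctxeq \Phi{\Phi''}$ in lockstep. Since $\D$ and $\E$ share the \emph{same} left component $\Phi$, both are built by the rule dictated by the last binding of $\Phi$, so the inversion of $\E$ is forced. In the empty case $\ldSgsubeq{\sigma}{\sigma'}{\D}$ unfolds to $\judge[L]\Psi$ and $\sigma = \sigma' = \cdot$, which is literally $\ldSgsubeq{\sigma}{\sigma'}{\E}$. In the context-variable case $\Phi = \Phi_1, g : \Ctx$, the data of $\ldSgsubeq{\sigma}{\sigma'}{\D}$ is $\sigma = \sigma_1,\Gamma/g$, $\sigma' = \sigma_1',\Gamma'/g$, the witness $\ldSctxeqge \proglevel \proglevel \Gamma{\Gamma'}$ (which does not mention $\D$, hence carries over unchanged), and the recursive part $\forall\theta.\,\ldSgsubeq[\Psi][L']{\sigma_1}{\sigma_1'}{\D'(\theta)}$, where $\D'$ is the $\ldSgctxeq$-premise of $\D$; applying the induction hypothesis at each $\theta$ to $\D'(\theta)$ and the corresponding premise $\E'(\theta)$ of $\E$ (both relating $\Phi_1$ on the left) gives $\forall\theta.\,\ldSgsubeq[\Psi][L']{\sigma_1}{\sigma_1'}{\E'(\theta)}$, and reassembling yields $\ldSgsubeq{\sigma}{\sigma'}{\E}$.

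The substantive cases are the type- and term-variable extensions, where the right-hand contexts genuinely differ. Take $\Phi = \Phi_1, U : \DTyp i l$: then $\D$ carries $\F_\D :: \ltSctxeq{\D'}{\Gamma}{\Gamma'}$ and $\E$ carries $\F_\E :: \ltSctxeq{\E'}{\Gamma}{\Gamma''}$, with $\Phi' = \Phi_1',U:\DTyp[\Gamma'] i l$ and $\Phi'' = \Phi_1'',U:\DTyp[\Gamma''] i l$. The recursive component of $\ldSgsubeq{\sigma}{\sigma'}{\D}$ is handled by the induction hypothesis as above; the remaining data, when $i = \codelevel$, is $\sigma = \sigma_1,T/U$, $\sigma' = \sigma_1',T/U$, together with $\ldtypwf \codelevel \proglevel T l$ and $\ldtypwf[\Psi][\Gamma'] \codelevel \proglevel T l$ (for $i = \proglevel$, the analogous $\ldStypeqge$-facts). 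The first, stated with the shared left context $\Gamma$, carries over unchanged; only the $\Gamma'$-indexed fact must be moved to $\Gamma''$. To obtain $\lpequiv \proglevel{\Gamma'}{\Gamma''}$ I would instantiate $\F_\D$ and $\F_\E$ at the universe identity, at the reflexive global identity substitution (the Reflexive Global Identity corollary, using $\judge[L]{\Phi_1}$ from escape) and at the reflexive local identity (\Cref{lem:dt:refl-lid}), obtaining $\ldctxeq \proglevel \proglevel \Gamma{\Gamma'}$ and $\ldctxeq \proglevel \proglevel \Gamma{\Gamma''}$; symmetry, transitivity and escape then give $\lpequiv \proglevel{\Gamma'}{\Gamma''}$. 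It then remains to transport $\ldtypwf \codelevel \proglevel$ (resp.\ $\ldtyping i \proglevel$, $\ldsubst i \proglevel$, and $\ldStypeqge$) along equivalence of their local contexts, which follows by induction on those inductively-defined semantic judgments using escape, the conversion lemma for $\ldtyping \codelevel \proglevel$, the local-substitution lemmas (\Cref{lem:dt:lsubst-cp,lem:dt:lsubst-ge}) and irrelevance of the $j = \proglevel$ relations from \Cref{sec:dt:logrel-prop}. The term-variable case $\Phi = \Phi_1, u : \DTrm i T l$ is the same, with one extra step: from the type-stability witnesses carried by $\D$ and $\E$, instantiated at identities, one gets $\ldtypeq \proglevel \proglevel{T[\sigma_1]}{T'[\sigma_1']}{l}$ and $\ldtypeq \proglevel \proglevel{T[\sigma_1]}{T''[\sigma_1']}{l}$, hence $\ldtypeq \proglevel \proglevel{T'[\sigma_1']}{T''[\sigma_1']}{l}$ by symmetry and transitivity, which together with $\lpequiv \proglevel{\Gamma'[\sigma_1']}{\Gamma''[\sigma_1']}$ lets us move the stored $\ldtyping[\Psi][\Gamma'[\sigma_1']] i \proglevel t{T'[\sigma_1']} l$ to $\ldtyping[\Psi][\Gamma''[\sigma_1']] i \proglevel t{T''[\sigma_1']} l$ via conversion.

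The main obstacle I anticipate is precisely this transport of the inductive semantic judgments $\ldtypwf \codelevel \proglevel$ and $\ldtyping i \proglevel$ along equivalence of their local contexts (and, for terms, of their types): unlike the purely relational data $\ldStypeqge$ and $\ldSctxeqge$, which quantify over all substitutions and tolerate such changes almost by definition, these judgments pin down the syntax of the stored code, so a context change has to be pushed through the whole derivation while re-establishing the embedded $\ldStypwfge$ and $\ldStypingge$ witnesses; this is the lemma I expect to be the most technical (and, following the paper's convention, it is presumably discharged in full only in the mechanization). Granting it, everything else is bookkeeping: inversion is forced by the shared $\Phi$, the recursive substitution component goes through by the induction hypothesis, and the other components either carry over verbatim or are resolved by escape, PER, and irrelevance of the $j = \proglevel$ relations.
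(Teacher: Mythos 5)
Your approach matches the paper's proof exactly: induction on $\D$ with inversion of $\E$, using the shared left component to force the shape of $\E$, and discharging the recursive substitution component by the IH. You are more honest than the paper's terse proof (which only says ``IHs are sufficient to discharge proof obligations''): you correctly identify that the $\Gamma'$-indexed data in $\ldSgsubeq \sigma{\sigma'}\D$ must be transported to $\Gamma''$ via an equivalence $\lpequiv \proglevel{\Gamma'}{\Gamma''}$, and that moving the inductively-defined $\ldtypwf \codelevel \proglevel$ and $\ldtyping i \proglevel$ judgments along such an equivalence is a genuine auxiliary lemma the paper leaves implicit; one small adjustment is that to obtain the needed local-context equivalence you should instantiate $\F_\D$ and $\F_\E$ at the actual $\sigma_1, \sigma_1'$ (already known related via the recursive component and the IH) rather than at a reflexive identity, so that the equivalence lands on $\Gamma'[\sigma_1']$ and $\Gamma''[\sigma_1']$ as required.
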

\begin{proof}
  We do induction on $\D$ and then invert $\E$. %
  This lemma is proved similar to previous irrelevance lemma. %
  IHs are sufficient to discharge proof obligations. 
\end{proof}

\begin{lemma}[Left Irrelevance]
  If $\D :: \ldSgctxeq {\Phi'}{\Phi}$, $\E :: \ldSgctxeq {\Phi''}\Phi$ and
  $\ldSgsubeq{\sigma}{\sigma'}{\D}$, then $\ldSgsubeq{\sigma}{\sigma'}{\E}$.
\end{lemma}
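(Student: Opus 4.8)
The plan is to deduce this lemma purely formally from the Symmetry and Right Irrelevance lemmas for semantic related global contexts just established, exactly as the corresponding Left Irrelevance lemmas for related types and for related local contexts above were obtained ``as a direct consequence of right irrelevance and symmetry''. First I would apply Symmetry to $\D :: \ldSgctxeq{\Phi'}{\Phi}$, obtaining a derivation $\D' :: \ldSgctxeq{\Phi}{\Phi'}$; the same lemma simultaneously transports the hypothesis $\ldSgsubeq{\sigma}{\sigma'}{\D}$ into $\ldSgsubeq{\sigma'}{\sigma}{\D'}$. Likewise, applying Symmetry to $\E :: \ldSgctxeq{\Phi''}{\Phi}$ yields $\E' :: \ldSgctxeq{\Phi}{\Phi''}$. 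The point of these two flips is to make $\Phi$ the common first component of $\D'$ and $\E'$.

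Now, since $\D'$ and $\E'$ agree on their first component, Right Irrelevance applies and converts $\ldSgsubeq{\sigma'}{\sigma}{\D'}$ into $\ldSgsubeq{\sigma'}{\sigma}{\E'}$. Applying Symmetry once more, this time to $\E'$, produces a derivation $\E'' :: \ldSgctxeq{\Phi''}{\Phi}$ together with $\ldSgsubeq{\sigma}{\sigma'}{\E''}$. Since $\E''$ and the given $\E$ share the same endpoints, in particular the first component $\Phi''$, one final appeal to Right Irrelevance replaces $\E''$ by $\E$ and delivers $\ldSgsubeq{\sigma}{\sigma'}{\E}$, which is the goal.

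I do not expect any genuine obstacle here: the statement only asserts membership in the relation, not the identity of a particular derivation, so the diagram chase above goes through mechanically once Symmetry and Right Irrelevance are available. The single subtlety worth noting is that iterating Symmetry does not return the syntactically identical derivation, which is precisely why the last appeal to Right Irrelevance (on two derivations sharing a domain) is needed; this is exactly the same bookkeeping as in the earlier Left Irrelevance lemmas, and in a written-out proof it suffices to record it as ``a direct consequence of right irrelevance and symmetry''.
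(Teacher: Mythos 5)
Your proof is correct and follows exactly the route the paper takes, which simply records it as an application of symmetry and right irrelevance. Your careful note that iterating symmetry need not reproduce the syntactically identical derivation, hence the final appeal to right irrelevance on $\E''$ and $\E$, is precisely the bookkeeping the terse proof in the paper elides.
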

\begin{proof}
  Apply symmetry and right irrelevance. 
\end{proof}

\begin{lemma}[Transitivity] $ $
  \begin{itemize}
  \item If $\D_1 :: \ldSgctxeq{\Phi_1}{\Phi_2}$ and $\D_2 ::
    \ldSgctxeq{\Phi_2}{\Phi_3}$, then $\D_3 :: \ldSgctxeq{\Phi_1}{\Phi_3}$.
  \item If $\E :: \ldSgctxeq{\Phi_1}{\Phi_1}$, $\ldSgsubeq{\sigma_1}{\sigma_2}{\D_1}$ and
    $\ldSgsubeq{\sigma_2}{\sigma_3}{\D_2}$, then $\ldSgsubeq{\sigma_1}{\sigma_3}{\D_3}$.
  \end{itemize}
\end{lemma}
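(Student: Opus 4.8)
The plan is to prove both items by a single simultaneous mutual induction on the derivation $\D_1$, inverting $\D_2$ (and the accompanying substitution derivations) at each step; this mirrors the \emph{Reflexivity and Transitivity} lemmas already established for the Kripke relations of local contexts and local substitutions in \Cref{sec:dt:logrel-prop}. As in that proof, transitivity of related global substitutions cannot be obtained in isolation: recursing under a binder forces us to produce a reflexive instance of the \emph{middle} global substitution $\sigma_2$. I would therefore strengthen the statement to also carry the reflexivity clauses $\E :: \ldSgctxeq{\Phi_1}{\Phi_1}$ and, from $\ldSgsubeq{\sigma_1}{\sigma_2}{\D_1}$, $\ldSgsubeq{\sigma_1}{\sigma_1}{\E}$, and prove the whole package at once. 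Determinacy of the inductive shape of $\D_2$ (its codomain $\Phi_2$ is fixed) is what lets us pair up the cases of $\D_1$ and $\D_2$.

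The case $\cdot$ is immediate: the contexts and substitutions are forced empty and we invoke $\judge[L]\Psi$ from the hypotheses. For the $g : \Ctx$ extension, I would apply the induction hypothesis to the $\theta$-indexed components $\E_1,\E_2$ (first re-weakening them via the universe-weakening lemma for these semantic definitions) to build $\E_3 :: \forall~\theta ~.~ \ldSgctxeq[L']{\Phi_1}{\Phi_3}$; for the substitution part the $\Gamma$-slot is discharged by transitivity of $\ldSctxeqge \proglevel \proglevel \Gamma{\Gamma'}$, which is a PER by the PER lemma for the auxiliary $\ge$-definitions. For the $U : \DTyp i l$ extension I would, besides the $\theta$-indexed IH, establish transitivity of $\ltSctxeq \E \Gamma{\Gamma'}$ and, when $i = \proglevel$, of $\ldStypeqge \proglevel \proglevel T{T'} l$; when $i = \codelevel$ we additionally carry $T = T'$ and the $\ldtypwf \codelevel \proglevel T l$ witnesses, which compose by the conversion lemma and the weakening / local-substitution closure lemmas for $\ldtypwf$. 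The $u : \DTrm i T l$ extension is analogous, using transitivity of $\ltSctxeq$ and $\ltStypeq$; here $i \in \{\varlevel,\codelevel\}$ forces $t_1 = t_2 = t_3$, so the term slot itself carries no computational content to track.

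The real work, and the step I expect to be the main obstacle, is transitivity of the higher-order stability judgments $\ltSctxeq \E \Gamma{\Gamma'}$ and $\ltStypeq \F \proglevel \proglevel T{T'} l$: these quantify over a universe weakening $\theta$, over a related pair of global substitutions via $\E(\theta)$, and over a related pair of local substitutions via $\F(\theta,\cdot,k)$. Given such data against the \emph{composite} derivations $\E_3/\F_3$, I would follow the pattern of the function case of the local transitivity proof: first relate $\Gamma_1[\sigma]$ (resp.\ $T[\sigma][\delta]$) with the middle $\Gamma_2[\sigma]$ (resp.\ $T_2[\sigma][\delta]$) using a reflexive instance of $\sigma$ and $\delta$ drawn from the strengthened reflexivity clauses, then relate the middle with $\Gamma_3[\sigma']$ (resp.\ $T_3[\sigma'][\delta']$) directly, and finally close with transitivity of related local contexts (resp.\ related types). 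Bridging the derivation mismatches — $\E_1$ relates $\Phi_1$ to $\Phi_2$ whereas $\E_3$ relates $\Phi_1$ to $\Phi_3$, and likewise for $\F$ and for the local substitution derivations — is exactly where the \emph{Right/Left Irrelevance} lemmas for related global and local substitutions become indispensable, and threading ``reflexivity, then irrelevance, then transitivity'' correctly across the three nested quantifiers is the delicate part of the argument. The $\omega$-level and transfinite-recursion bookkeeping is orthogonal: since $\D_1$ is a genuine inductive derivation the induction is well-founded, and the universe-polymorphic function type never appears in a global binding, so $l = \omega$ does not arise in this lemma at all.
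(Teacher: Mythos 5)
Your proposal matches the paper's approach: the paper's own proof is just ``induction on $\D_1$, invert $\D_2$, then use reflexivity and transitivity of the previous definitions,'' and you have correctly reconstructed the omitted detail --- bundling reflexivity with transitivity (the statement already carries $\E :: \ldSgctxeq{\Phi_1}{\Phi_1}$ as a hypothesis, exactly for the reason you identify), treating transitivity of the higher-order stability judgments $\ltSctxeq$ and $\ltStypeq$ as the crux, and using left/right irrelevance to bridge the mismatch between the composite $\E_3,\F_3$ and the component $\E_1,\F_1,\E_2,\F_2$ witnesses, all mirroring the local Reflexivity-and-Transitivity proof in \Cref{sec:dt:logrel-prop}.
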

\begin{proof}
  We do induction on $\D_1$ and then invert
  $\D_2$. %
  The lemma is immediate after use of reflexivity and transitivity of previous
  definitions.  %
\end{proof}

\subsection{Logical Relations When $j = \metalevel$}

We are ready for defining the rest of the logical relations. %
The only cases left are those when $j = \metalevel$. %
In this case, we also know that $i = \metalevel$. %
We first begin with the contextual types for types.
\begin{mathpar}
  \D :: \inferrule
  {\ttypreds \metalevel T{\CTyp[\Delta] l}{0} \\ \ttypreds \metalevel{T'}{\CTyp[\Delta'] l}{0}
    \\
    \lpjudge \metalevel \Gamma \\
    \lpjudge \proglevel \Delta \\
    \lpjudge \proglevel {\Delta'} \\
    \lttypgeq \metalevel {\CTyp[\Delta] l}{\CTyp[\Delta'] l}{0} \\
    \ldSctxeqge \proglevel \proglevel \Delta{\Delta'}}
  {\ldtypeq \metalevel \metalevel{T}{T'}{0}}
\end{mathpar}
Then $\ldtyequiv \metalevel \metalevel t{t'} \D$ is defined by
\begin{itemize}
\item $\ttrmreds \metalevel t w {\CTyp[\Delta] l}{0}$,
\item $\ttrmreds \metalevel{t'}{w'}{\CTyp[\Delta'] l}{0}$,
\item $\lttrmgeq \metalevel w{w'}{\CTyp[\Delta] l}{0}$,
\item $\lttrmgeq \metalevel w{w'}{\CTyp[\Delta'] l}{0}$,
\item then we have two auxiliary definitions $\ldtynfeqm w{w'}{\CTyp[\Delta] l}$, and
\item $\ldtynfeqm w{w'}{\CTyp[\Delta'] l}$.
\end{itemize}
\begin{mathpar}
  \inferrule
  {\ldtypwf[\Psi][\Delta] \codelevel \proglevel {T_1} l}
  {\ldtynfeqm {\boxit{T_1}}{\boxit{T_1}}{\CTyp[\Delta] l}}

  \inferrule
  {\lttrmgneeq \metalevel \nu{\nu'}{\CTyp[\Delta] l}{0}}
  {\ldtynfeqm \nu{\nu'}{\CTyp[\Delta] l}}
\end{mathpar}
Notice how the relation relies on previous relations where $j = \proglevel$. %
In particular, the local contexts $\Delta$ and $\Delta'$ are related when for layers
$\ge \proglevel$, so that they are contexts for boxed types.  When we relate $t$ and $t'$, we
follow the routine: we first reduce them to weak head normal forms, and say that the
normal forms are generically equivalent under two types to make symmetry an easy
proof. %
Then the weak head normal forms are related by the auxiliary relation
$\ldtynfeqm w{w'}{\CTyp[\Delta] l}$. %
This relation treats a contextual type as a sum type. %
There are two possibilities: a normal form of a contextual type is either neutral,
where there is nothing else to do, or contains the code of a type. %
Due to the static code lemma, we know that these types must be syntactically equal. %
In this case, if the type is $T_1$, we require
$\ldtypwf[\Psi][\Delta] \codelevel \proglevel {T_1} l$. This predicate is what we just defined: it
keeps track of both the syntactic and semantic information about $T_1$. %
In this way, we are sure that $T_1$ semantically works fine whenever its syntactic
information and/or its semantic information is needed. %
The same principle is applied to contextual types for terms.
\begin{mathpar}
  \D :: \inferrule
  {\ttypreds \metalevel T{\CTrm[\Delta]{T_1} l}{0} \\ \ttypreds \metalevel{T'}{\CTrm[\Delta']{T_1'} l}{0}
    \\
    \lpjudge \metalevel \Gamma \\
    \lttypwf[\Psi][\Delta] \proglevel{T_1} l \\
    \lttypwf[\Psi][\Delta'] \proglevel{T_1'} l \\
    \lttypgeq \metalevel {\CTrm[\Delta]{T_1} l}{\CTrm[\Delta']{T_1'} l}{0} \\
    \ldSctxeqge \proglevel \proglevel \Delta{\Delta'} \\
    \ldStypeqge[\Psi][\Delta] \proglevel \proglevel{T_1}{T_1'} l \\
    \ldStypeqge[\Psi][\Delta'] \proglevel \proglevel{T_1'}{T_1} l}
  {\ldtypeq \metalevel \metalevel{T}{T'}{0}}
\end{mathpar}
Then $\ldtyequiv \metalevel \metalevel t{t'} \D$ is defined by
\begin{itemize}
\item $\ttrmreds \metalevel t w {\CTrm[\Delta]{T_1}l}{0}$,
\item $\ttrmreds \metalevel{t'}{w'}{\CTrm[\Delta']{T_1'} l}{0}$,
\item $\lttrmgeq \metalevel w{w'}{\CTrm[\Delta]{T_1} l}{0}$,
\item $\lttrmgeq \metalevel w{w'}{\CTrm[\Delta']{T_1'} l}{0}$,
\item then we have two auxiliary definitions $\ldtynfeqm w{w'}{\CTrm[\Delta]{T_1} l}$,
  and $\ldtynfeqm w{w'}{\CTrm[\Delta']{T_1'} l}$.
\end{itemize}
\begin{mathpar}
  \inferrule
  {\ldtyping[\Psi][\Delta] \codelevel \proglevel {t_1}{T_1} l}
  {\ldtynfeqm {\boxit{t_1}}{\boxit{t_1}}{\CTrm[\Delta]{T_1} l}}

  \inferrule
  {\lttrmgneeq \metalevel \nu{\nu'}{\CTrm[\Delta]{T_1} l} 0}
  {\ldtynfeqm \nu{\nu'}{\CTrm[\Delta]{T_1} l}}
\end{mathpar}

Then we take a look at the two kinds of meta-functions. %
Let us first consider the meta-functions for local contexts. %
\begin{mathpar}
  \D :: \inferrule
  {\ttypreds \metalevel T{\CPI g l{T_1}}{l} \\ \ttypreds \metalevel{T'}{\CPI g l{T_1'}}{l} \\
    \lpjudge \metalevel \Gamma \\
    \lttypwf[\Psi, g : \Ctx] \metalevel{T_1} l \\
    \lttypwf[\Psi, g : \Ctx] \metalevel{T_1'} l \\
    \lttypgeq \metalevel {\CPI g l{T_1}}{\CPI g l{T_1'}}{l} \\
    \E :: \forall~ \psi :: L' \sep \Phi;\Delta'' \To_i L \sep \Psi; \Gamma \tand
    \ldSctxeqge[\Phi][L'] \proglevel \proglevel \Delta{\Delta'} ~.~
    \ldtypeq[\Phi][\Delta''][L'] \metalevel \metalevel{T_1[\Delta/g]}{T_1'[\Delta'/g]}{l}}
  {\ldtypeq \metalevel \metalevel{T}{T'}l}
\end{mathpar}
Then $\ldtyequiv \metalevel \metalevel t{t'} \D$ is defined by
\begin{itemize}
\item $\ttrmreds \metalevel t w{\CPI g l{T_1}}{l}$,
\item $\ttrmreds \metalevel{t'}{w'}{\CPI g l{T_1'}}{l}$,
\item $\lttrmgeq \metalevel w{w'}{\CPI g l{T_1}}{l}$,
\item $\lttrmgeq \metalevel w{w'}{\CPI g l{T_1'}}{l}$, and
\item given
  \begin{itemize}
  \item $\psi :: L' \sep \Phi;\Delta'' \To_i L \sep \Psi; \Gamma$, and
  \item $\Ac :: \ldSctxeqge[\Phi][L'] \proglevel \proglevel \Delta{\Delta'}$,
  \end{itemize}
  then 
  \[
    \ldtyequiv[\Phi][\Delta''][L'] \metalevel \metalevel{\CAPP{w}{\Delta}}{\CAPP{w'}{\Delta'}}{\E(\psi,
      \Ac)}
  \]
\end{itemize}

We apply a similar principle to the meta-functions for types.
\begin{mathpar}
  \D :: \inferrule
  {\ttypreds \metalevel T{\TPI U[\Delta]{l}{l'}{T_1}}{l'} \\
    \ttypreds \metalevel{T'}{\TPI U[\Delta']{l}{l'}{T_1'}}{l'} \\
    \lpjudge \metalevel \Gamma \\
    \lttypwf[\Psi, U : \DTyp[\Delta] \proglevel l] \metalevel{T_1}{l'} \\
    \lttypwf[\Psi, U : \DTyp[\Delta'] \proglevel l] \metalevel{T_1}{l'} \\
    \lttypgeq \metalevel {\TPI U[\Delta]{l}{l'}{T_1}}{\TPI
      U[\Delta']{l}{l'}{T_1'}}{l'} \\
    \ldSctxeqge[\Phi][L'] \proglevel \proglevel \Delta{\Delta'} \\
    \E_1 :: \forall~ \psi :: L' \sep \Phi;\Delta'' \To_i L \sep \Psi; \Gamma \tand
    \ldStypeqge[\Phi][\Delta][L'] \proglevel \proglevel {T_2}{T_2'} l ~.~
    \ldtypeq[\Phi][\Delta''][L'] \metalevel \metalevel{T_1[T_2/U]}{T_1'[T_2'/U]}{l'} \\
    \E_2 :: \forall~ \psi :: L' \sep \Phi;\Delta'' \To_i L \sep \Psi; \Gamma \tand
    \ldStypeqge[\Phi][\Delta'][L'] \proglevel \proglevel {T_2}{T_2'} l ~.~
    \ldtypeq[\Phi][\Delta''][L'] \metalevel \metalevel{T_1'[T_2/U]}{T_1[T_2'/U]}{l'}}
  {\ldtypeq \metalevel \metalevel{T}{T'}{l'}}
\end{mathpar}
Notice the symmetry between $\E_1$ and $\E_2$. %
Then $\ldtyequiv \metalevel \metalevel t{t'} \D$ is defined by
\begin{itemize}
\item $\ttrmreds \metalevel t w{\TPI U[\Delta]{l}{l'}{T_1}}{l'}$,
\item $\ttrmreds \metalevel{t'}{w'}{\TPI U[\Delta']{l}{l'}{T_1'}}{l'}$,
\item $\lttrmgeq \metalevel w{w'}{\TPI U[\Delta]{l}{l'}{T_1}}{l'}$,
\item $\lttrmgeq \metalevel w{w'}{\TPI U[\Delta']{l}{l'}{T_1'}}{l'}$, and
\item given
  \begin{itemize}
  \item $\psi :: L' \sep \Phi;\Delta'' \To_i L \sep \Psi; \Gamma$, and
  \item $\Ac :: \ldStypeqge[\Phi][\Delta][L'] \proglevel \proglevel {T_2}{T_2'} l$,
  \end{itemize}
  then 
  \[
    \ldtyequiv[\Phi][\Delta''][L'] \metalevel \metalevel{\TAPP{w}{T_2}}{\TAPP{w'}{T_2'}}{\E_1(\psi,
      \Ac)}
  \]
\item and last symmetrically given
  \begin{itemize}
  \item $\psi :: L' \sep \Phi;\Delta'' \To_i L \sep \Psi; \Gamma$, and
  \item $\Ac :: \ldStypeqge[\Phi][\Delta'][L'] \proglevel \proglevel {T_2}{T_2'} l$,
  \end{itemize}
  then 
  \[
    \ldtyequiv[\Phi][\Delta''][L'] \metalevel \metalevel{\TAPP{w'}{T_2}}{\TAPP{w}{T_2'}}{\E_2(\psi,
      \Ac)}
  \]
\end{itemize}

The last type is the universe-polymorphic function types. %
They are special because they parameterize over universe levels. %
Therefore, a universe-polymorphic function cannot live in any finite universe as it
can be instantiated to any small universe. %
Our semantics must incorporate this fact and introduce the $\omega$ level, which
further requires a transfinite recursion on the universe levels. %
\begin{mathpar}
  \D :: \inferrule
  {\ttypreds \metalevel T{\UPI \ell l{T_1}}{\omega} \\
    \ttypreds \metalevel{T'}{\UPI \ell l{T_1'}}{\omega} \\
    \lpjudge \metalevel \Gamma \\
    \lttypwf[\Psi][\Gamma][L, \vect \ell] \metalevel{T_1} l \\
    \lttypwf[\Psi][\Gamma][L, \vect \ell] \metalevel{T_1'} l \\
    \lttypgeq \metalevel{\UPI \ell l{T_1}}{\UPI \ell l{T_1'}}{\omega} \\
    |\vect\ell| > 0 \\
    \begin{minipage}{\textwidth}
      $\E :: \forall~ \psi :: L' \sep \Phi;\Delta \To_i L \sep \Psi; \Gamma \tand
      |\vect\ell| = |\vect l| = |\vect l'| \tand $ \newline
      \phantom{10000} $(\forall 0 \le n < |\vect l| ~.~
      \tyequiv[L']{\vect l(n)}{\vect l'(n)}\Level) ~.~ \ldtypeq[\Phi][\Delta][L'] \metalevel
      \metalevel{T_1[\vect l/\vect \ell]}{T_1'[\vect l'/\vect \ell]}{l[\vect l/\vect \ell]}$
    \end{minipage}}
  {\ldtypeq \metalevel \metalevel{T}{T'}\omega}
\end{mathpar}
Then $\ldtyequiv \metalevel \metalevel t{t'} \D$ is defined by
\begin{itemize}
\item $\ttrmreds \metalevel t w{\UPI \ell l{T_1}}{\omega}$,
\item $\ttrmreds \metalevel{t'}{w'}{\UPI \ell l{T_1'}}{\omega}$,
\item $\lttrmgeq \metalevel w{w'}{\UPI \ell l{T_1}}{\omega}$,
\item $\lttrmgeq \metalevel w{w'}{\UPI \ell l{T_1'}}{\omega}$, and
\item given
  \begin{itemize}
  \item $\psi :: L' \sep \Phi;\Delta \To_i L \sep \Psi; \Gamma$, 
  \item $\Ac :: |\vect\ell| = |\vect l| = |\vect l'|$, and
  \item $\Bc :: \forall 0 \le n < |\vect l| ~.~ \tyequiv[L]{\vect l(n)}{\vect l'(n)}\Level$,
  \end{itemize}
  then 
  \[
    \ldtyequiv[\Phi][\Delta][L'] \metalevel \metalevel{\UAPP{w}{l}}{w~\$~\vect l'}{\E(\psi,
      \Ac,\Bc)}
  \]
\end{itemize}
Notice how the premise $\E$ changes its universe levels based on different
universes, where exactly transfinite recursion becomes necessary. %

Now we have finished defining all definitions, including the Kripke logical relations
for types and terms. %
The logical relations for local contexts and local substitutions have been given
generally in \Cref{sec:dt:klogrel}. %
Then we will examine the properties of the logical relations when $i = j = \metalevel$. %
Then we are ready for giving the definitions for the semantic judgments as well as
establish the fundamental theorems. %
By instantiating the fundamental theorems with syntactic equivalence, we are able to
obtain a few consequence lemmas, which will be subsequently used in our second
instantiation. %
In the second instantiation of the fundamental theorems, we use the convertibility
checking judgments, from which we derive our final desired the decidability theorem
for convertibility checking. %

\subsection{Properties for Logical Relations When $i = j = \metalevel$}

Now we move on to consider the properties of the logical relations at the final
layers, i.e. when $i = j = \metalevel$. %
In this case, we are considering the relations of all possible types among all
possible terms. %
We first begin with the regular properties, when we will state an important property,
layering restriction, which states how the logical relations are transferred between
$i \in \{\proglevel, \metalevel\}$ when $j = \metalevel$. %
Let us first begin with the simple ones. %
These properties follow \Cref{sec:dt:logrel-prop} quite closely as for the types that
exist at both layers $\proglevel$ and $\metalevel$ the lemma proceeds in a similar way. %
The only difference comes in for types only available at layer $\metalevel$. 
\begin{lemma}[Weakening] $ $
  \begin{itemize}
  \item If $\D :: \ldtypeq \metalevel \metalevel{T}{T'}{l}$ and
    $\psi :: L' \sep \Phi;\Delta \To_m L \sep \Psi; \Gamma$, then
    $\E :: \ldtypeq[\Phi][\Delta][L'] \metalevel \metalevel{T}{T'}{l}$.
  \item If $\ldtyequiv \metalevel \metalevel{t}{t'}\D$ and
    $\psi :: L' \sep \Phi;\Delta \To_m L \sep \Psi; \Gamma$, then
    $\ldtyequiv[\Phi][\Delta][L'] \metalevel \metalevel{t'}{t}\E$.
  \end{itemize}
\end{lemma}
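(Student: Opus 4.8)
The plan is to prove both statements simultaneously by induction on the derivation $\D :: \ldtypeq \metalevel \metalevel T{T'} l$, mirroring the argument for the analogous Weakening lemma in \Cref{sec:dt:logrel-prop} (the case $j = \proglevel$). The only genuinely new work is the handling of the types available only at layer $\metalevel$, namely the contextual types $\CTyp[\Delta]l$ and $\CTrm[\Delta]{T_1}l$, the meta-function types $\CPI g l{T_1}$ and $\TPI U[\Delta]l{l'}{T_1}$, and the universe-polymorphic function types $\UPI \ell l{T_1}$; all other cases ($\Nat$, universes, $\Pi$, neutral types, and the bureaucratic universe-equivalence case) are carried over verbatim from that earlier proof, using that the typing judgments appearing as side premises are invariant under weakening, that the weak-head reduction judgments are stable under weakening, and that the generic equivalence satisfies the Weakening law.

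For the contextual-type cases I would first weaken the reduction premises and the generic-equivalence premises ($\lttypgeq$, $\lttrmgeq$, $\lttrmgneeq$) exactly as above, and then observe that the remaining semantic premises — $\ldSctxeqge \proglevel \proglevel \Delta{\Delta'}$, and for $\CTrm$ also $\ldStypeqge \proglevel \proglevel{T_1}{T_1'}l$ — are precisely the auxiliary judgments shown to be closed under weakenings in \Cref{sec:dt:semglob}. For the recursive clause of $\ldtyequiv \metalevel \metalevel t{t'}\D$ where the normal form is a boxed piece of code, the relevant side condition is $\ldtypwf[\Psi][\Delta] \codelevel \proglevel{T_1}l$ (resp. $\ldtyping[\Psi][\Delta] \codelevel \proglevel{t_1}{T_1}l$), and these three mutually inductive judgments are likewise closed under weakenings by the corresponding lemma in \Cref{sec:dt:semglob}; in the neutral sub-case one again appeals to the Weakening law of the generic equivalence.

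For the meta-function and universe-polymorphic cases the pattern is the one already used for $\Pi$: each defining clause quantifies over a further weakening $\psi'$ out of $L' \sep \Phi; \Delta$ (and, for $\UPI$, additionally over lists of universe levels together with equivalences between them), and to discharge the conclusion at $\psi'$ we simply invoke the corresponding stored clause of $\D$ at the composite $\psi' \circ \psi$, then repackage the resulting derivation $\E$ together with the composed Kripke data. Composition of the three kinds of weakenings, and of the bundled triple $\psi$, is associative and functorial in the expected way, so this step is purely bookkeeping. The only place that needs a moment of care is $\UPI$: there the quantifier ranges over universe weakenings, so the universe component $\theta$ of $\psi$ must be composed with the quantified universe weakening before the clause of $\D$ is applied, and one must check that the indexing level $l[\vect l/\vect\ell]$ is transported coherently — but since the induction is on the structure of $\D$ rather than on universe levels, the transfinite recursion that defines the relation causes no well-foundedness trouble here.

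I expect the main obstacle to be not any single case but the cumulative bookkeeping of the three weakening notions and their composition inside the Kripke quantifiers of the $\CPI$, $\TPI$, and $\UPI$ cases: in particular, making sure the composed weakening lands in the correct layer ($m$ throughout, since $i = j = \metalevel$) and that the auxiliary ``$\ge$''-judgments are weakened at layer $\proglevel$ before being fed back into the reconstructed derivation, so that the recursive appeals to the $j = \proglevel$ machinery of \Cref{sec:dt:semglob} line up with the layers demanded by the $j = \metalevel$ clauses.
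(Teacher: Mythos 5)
Your proposal matches the paper's proof: induction on $\D$, porting the $j = \proglevel$ cases verbatim, handling the contextual-type premises by observing that only the $\alpha = (\theta,\gamma)$ component of $\psi$ acts on the $\ge$-judgments and the $\ldtypwf \codelevel \proglevel$/$\ldtyping \codelevel \proglevel$ judgments (whose local context is the separate $\Delta$ inside the contextual type, untouched by $\psi$'s $\tau$ component), and dispatching the meta-function and $\UPI$ cases by composing the stored Kripke quantifier with $\psi$. One small inaccuracy: in the $\UPI$ case the relation quantifies over a full weakening $\psi'$ plus two lists of universe levels $\vect l, \vect l'$, not over an additional universe weakening — but since the argument is still ``compose the quantified $\psi'$ with $\psi$ and keep $\vect l, \vect l'$ fixed,'' this does not affect the correctness of the proof.
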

\begin{proof}
  Induction on $\D$. %
  For the cases that overlap with $j = \proglevel$, they can be ported directly to this
  lemma. %
  Therefore, the only cases we need to consider are the new rules. %
  The meta-functions are rather routine as weakenings are built in their
  definitions. %
  Let us consider contextual types.
  \begin{mathpar}
    \D :: \inferrule
    {\ttypreds \metalevel T{\CTrm[\Delta]{T_1} l}{0} \\ \ttypreds \metalevel{T'}{\CTrm[\Delta']{T_1'} l}{0}
      \\
      \lttypgeq \metalevel {\CTrm[\Delta]{T_1} l}{\CTrm[\Delta']{T_1'} l}{0} \\
      \ldSctxeqge \proglevel \proglevel \Delta{\Delta'} \\
      \ldStypwfge[\Psi][\Delta] \proglevel \proglevel T l \\
      \ldStypwfge[\Psi][\Delta'] \proglevel \proglevel{T'} 0}
    {\ldtypeq \metalevel \metalevel{T}{T'}{l}}
  \end{mathpar}
  This case is almost immediate. %
  For judgments like $\ldSctxeqge \proglevel \proglevel \Delta{\Delta'}$ and
  $\ldStypwfge[\Psi][\Delta] \proglevel \proglevel T l$, we know we can extract from
  $\psi :: L' \sep \Phi;\Delta \To_m L \sep \Psi; \Gamma$
  $\alpha :: L' \sep \Phi \To L \sep \Psi$. %
  In particular, the local context in $\ldStypwfge[\Psi][\Delta] \proglevel \proglevel T l$ is not
  impacted by the weakening. %
  For $\ldtyequiv \metalevel \metalevel t{t'} \D$, we shall also apply the weakening lemma to $\alpha$
  to weaken $\ldtyping[\Psi][\Delta] \codelevel \proglevel {t_1}T l$.
\end{proof}

\begin{lemma}[Escape] $ $
  \begin{itemize}
  \item If $\D :: \ldtypeq \metalevel \metalevel{T}{T'}{l}$, then $\lttypgeq \metalevel {T}{T'} l$.
  \item If $\ldtyequiv \metalevel \metalevel{t}{t'}\D$, then $\lttrmgeq \metalevel t{t'} T l$ and $\lttrmgeq \metalevel t{t'}{T'} l$.
  \end{itemize}  
\end{lemma}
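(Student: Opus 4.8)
The plan is to prove both parts simultaneously by induction on the derivation $\D :: \ldtypeq \metalevel \metalevel{T}{T'}{l}$, following exactly the strategy of the Escape lemma of \Cref{sec:dt:logrel-prop}. For every rule of $\D$ that already occurs in the logical relations at $j = \proglevel$ --- natural numbers, universes, $\Pi$-types, neutral types, and the universe-level conversion rule --- the argument is transcribed verbatim: the defining clauses of $\D$ hand us reductions $\ttypreds \metalevel T W l$ and $\ttypreds \metalevel{T'}{W'}l$ together with $\lttypgeq \metalevel W{W'}l$ of the two weak head normal forms (in the neutral case this comes via the Subsumption law, promoting $\lttypgneeq \metalevel V{V'}l$ to $\lttypgeq \metalevel V{V'}l$), and the Weak Head Closure law then yields $\lttypgeq \metalevel T{T'}l$. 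For the term part, the definition of $\ldtyequiv \metalevel \metalevel t{t'}\D$ supplies the reductions of $t$ and $t'$ to weak head normal forms $w,w'$ and generic equivalences $\lttrmgeq \metalevel w{w'}T l$ and $\lttrmgeq \metalevel w{w'}{T'}l$ --- the two copies are precisely why the second conclusion is immediate --- so, after moving the recorded type annotation through the conversion rule (using soundness of reduction to see $T \approx W$ and $T \approx T'$, together with the Type Conversion law), Weak Head Closure gives $\lttrmgeq \metalevel t{t'}T l$ and $\lttrmgeq \metalevel t{t'}{T'}l$.

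The new work is the five rules appearing only at the top layer: the contextual types $\CTyp[\Delta]l$ and $\CTrm[\Delta]{T_1}l$, the meta-function types $\CPI g l{T_1}$ and $\TPI U[\Delta]{l}{l'}{T_1}$, and the universe-polymorphic type $\UPI \ell l{T_1}$. In each of these the logical-relation clause was deliberately set up to carry the generic type equivalence of the two weak head normal forms as an explicit premise --- for instance $\lttypgeq \metalevel {\CTyp[\Delta]l}{\CTyp[\Delta']l}{0}$, $\lttypgeq \metalevel{\TPI U[\Delta]{l}{l'}{T_1}}{\TPI U[\Delta']{l}{l'}{T_1'}}{l'}$, and $\lttypgeq \metalevel{\UPI \ell l{T_1}}{\UPI \ell l{T_1'}}{\omega}$ --- so the type part follows by Weak Head Closure exactly as above. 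Likewise the definition of $\ldtyequiv \metalevel \metalevel t{t'}\D$ in each case already records the reductions of $t,t'$ and the generic equivalences $\lttrmgeq \metalevel w{w'}{\ldots}{\ldots}$ of the weak head normal forms under both types, so the term part is again just Weak Head Closure composed with Type Conversion to replace the recorded annotation by $T$, respectively $T'$; the auxiliary normal-form relations $\ldtynfeqm w{w'}{\ldots}$ that also sit in those clauses play no role in escape.

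Because the laws of the generic equivalence and the clauses of the logical relation were engineered precisely so that escape holds ``by construction'', I do not expect a genuine obstacle; the two points requiring care are bureaucratic. First, in the neutral-type case and in the neutral subcases of the term part one must remember to invoke the Subsumption law before Weak Head Closure, since those clauses only expose $\lttypgneeq \metalevel V{V'}l$ and $\lttrmgneeq \metalevel \nu{\nu'}T l$. Second, the various $\tyequiv[L]\cdot\cdot\Level$ side conditions attached to the clauses --- e.g. $\tyequiv[L]{l_1}{l}\Level$ in the universe rule, or the pointwise level equalities in the $\UPI$ rule --- must be threaded through so that the level in the conclusion is the one demanded by the statement of $\D$, which is handled either by the universe-level conversion rule of the logical relation itself or by the level-conversion compatibility built into the generic equivalence (and present in both intended instantiations). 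Neither of these is a real difficulty, which contrasts sharply with the later lemmas (transitivity, irrelevance), whose function and contextual-type cases carry the actual technical weight.
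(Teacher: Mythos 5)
Your proposal is correct and takes essentially the same approach as the paper, which simply says ``Case analyze $\D$. The lemma holds by construction.'' --- your write-up is the detailed unpacking of that remark, using Subsumption, Weak Head Closure, Type Conversion and the soundness of reduction to turn the explicitly-stored premises of each clause into the desired generic equivalences. One small inaccuracy: for the $\Nat$ and universe cases the defining clause of $\D$ does \emph{not} hand you $\lttypgeq i W{W'}l$ of the normal forms directly; that must be manufactured from the Type Constructors law (using $\lpjudge i \Gamma$ obtained by presupposition from the reduction premises), whereas the $\Pi$, neutral, contextual, meta-function and $\UPI$ clauses do carry $\lttypgeq$ (or $\lttypgneeq$, promoted by Subsumption) explicitly --- so the blanket phrase ``the defining clauses hand us ... $\lttypgeq$'' overstates the uniformity slightly, though it does not affect the argument.
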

\begin{proof}
  Case analyze $\D$. %
  The lemma holds by construction.
\end{proof}

\begin{lemma}[Reflexivity of Neutral]
  If $\D :: \ldtypeq \metalevel \metalevel{T}{T'}l$, $\lttrmgneeq \metalevel \nu{\nu'} T l$ and $\lttrmgneeq \metalevel
  \nu{\nu'}{T'} l$, then $\ldtyequiv \metalevel \metalevel{\nu}{\nu'}\D$.
\end{lemma}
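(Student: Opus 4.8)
The plan is to proceed by induction on the derivation $\D :: \ldtypeq \metalevel \metalevel T{T'} l$, mirroring the proof of the analogous Reflexivity of Neutral lemma for $j = \proglevel$ in \Cref{sec:dt:logrel-prop}. For every case of $\D$ that also occurs when $j = \proglevel$ --- natural numbers, universes, $\Pi$ types, neutral types, and the bureaucratic universe-equivalence case --- the argument is essentially verbatim the one already given there: since $\nu$ and $\nu'$ are neutral they are already weak head normal, so the reduction premises hold by reflexivity of reduction; the generic-equivalence premises come from the hypotheses $\lttrmgneeq \metalevel \nu{\nu'} T l$ and $\lttrmgneeq \metalevel \nu{\nu'}{T'} l$ after transporting along $T \reds W$ and $T' \reds W'$ using soundness of multi-step reductions and the type conversion law; the subsumption law turns $\lttrmgneeq$ into $\lttrmgeq$ wherever the definition of $\ldtyequiv$ asks for it; and in the $\Pi$ case we additionally invoke the congruence-for-neutrals law to relate $\APP \nu{\dots}$ and $\APP{\nu'}{\dots}$ before closing by the inner induction hypothesis on the codomain derivation.

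So the real content is the five new cases. For the two contextual types, $T$ reduces to $\CTyp[\Delta] l$ (resp.\ $\CTrm[\Delta]{T_1} l$) and $T'$ to $\CTyp[\Delta']l$ (resp.\ $\CTrm[\Delta']{T_1'}l$); again $\nu,\nu'$ are their own weak head normal forms, the $\lttrmgeq$ bullets follow from the hypotheses by conversion along the reductions and subsumption, and each of the two required auxiliary judgments $\ldtynfeqm \nu{\nu'}{\CTyp[\Delta]l}$ and $\ldtynfeqm \nu{\nu'}{\CTyp[\Delta']l}$ (mutatis mutandis for $\CTrm$) is discharged by its neutral constructor, whose sole premise is exactly a conversion of one of the two generic-equivalence hypotheses. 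For the three function-like types ($\CPI$, $\TPI$, $\UPI$) the pattern is the same as the $\Pi$ case: $T$ and $T'$ reduce to the corresponding function types, $\nu,\nu'$ are weak head normal, and the definition of $\ldtyequiv \metalevel \metalevel \nu{\nu'}\D$ requires, under an arbitrary weakening $\psi$ and an extra datum (a related pair of contexts $\Ac :: \ldSctxeqge \proglevel \proglevel \Delta{\Delta'}$ for $\CPI$, a related pair of types $\Ac :: \ldStypeqge \proglevel \proglevel{T_2}{T_2'} l$ for $\TPI$, a tuple of equivalent level vectors for $\UPI$), that the applied neutrals $\CAPP \nu\Delta$ and $\CAPP{\nu'}{\Delta'}$ (resp.\ $\TAPP \nu{T_2}$/$\TAPP{\nu'}{T_2'}$, resp.\ $\UAPP \nu l$/$\nu'~\$~\vect l'$) be related by the codomain derivation $\E(\psi,\Ac)$. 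Each such applied form is again neutral, so I would (1) weaken the two generic-equivalence hypotheses along $\psi$ using the Weakening law; (2) apply the relevant congruence-for-neutrals law at layer $\metalevel$ (the $\CAPP$/$\TAPP$/$\UAPP$ laws) to obtain generic equivalence of the two applied neutrals, using escape on $\Ac$ to feed the side conditions of those laws and the type conversion law to reconcile the residual annotations; and (3) invoke the inner induction hypothesis on $\E(\psi,\Ac)$ to conclude.

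The main obstacle I expect is purely bookkeeping in these last three cases: the neutral grammar keeps explicit type, context, and level annotations inside eliminators, while $\D$'s definitional data only ties the two sides together up to the generic equivalence, so care is needed to (a) extract the ambient weakening $\alpha$ from $\psi$ and the context/type equivalences from the derivation in order to satisfy the premises of the congruence-for-neutrals laws, and (b) use the type conversion law of the generic equivalence to absorb the differences between the annotations appearing in $T$ versus $T'$ before the inner induction hypothesis applies. The $\UPI$ case carries the additional mild subtlety of the $\omega$ level and the transfinite recursion on universe levels, but since nothing computational happens on $\omega$ this only amounts to threading the level-vector data through unchanged. No new ideas beyond those already used for the $j = \proglevel$ version and for the definitions in \Cref{sec:dt:semglob} are required.
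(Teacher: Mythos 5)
Your proposal is correct and follows exactly the approach the paper intends: the paper's own proof is just ``Induction on $\D$. We proceed similarly to the counterpart when $j = \proglevel$,'' and your fleshed-out argument (reflexivity of reduction for neutrals, soundness of reduction plus the type-conversion and subsumption laws to discharge the generic-equivalence bullets, the neutral constructor of $\ldtynfeqm$ for the two contextual types, and the congruence-for-neutrals laws plus the inner IH for the three function-like types) is precisely that counterpart extended to the five $\metalevel$-only cases. No discrepancy.
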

\begin{proof}
  Induction on $\D$. %
  We proceed similarly to the counterpart when $j = \proglevel$. 
\end{proof}

\begin{lemma}[Weak Head Expansion] $ $
  \begin{itemize}
  \item If $\D :: \ldtypeq \metalevel \metalevel{T}{T'}{l}$, $\ttypreds \metalevel{T_1} T l$ and $\ttypreds
    \metalevel{T_1'}{T'}l$, then\break $\ldtypeq \metalevel \metalevel{T_1}{T_1'} l$.
  \item If $\ldtyequiv \metalevel \metalevel{t}{t'}\D$, $\ttrmreds \metalevel{t_1}{t}{T} l$ and $\ttrmreds
    \metalevel{t_1'}{t'}{T'}l$, then $\ldtyequiv \metalevel \metalevel{t_1}{t_1'} \D$.
  \end{itemize}  
\end{lemma}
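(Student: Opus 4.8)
The plan is to prove both statements simultaneously by induction on the derivation $\D :: \ldtypeq \metalevel \metalevel{T}{T'}{l}$: the type part is discharged directly in each case, and the term part is handled by unfolding the recursive clause defining $\ldtyequiv \metalevel \metalevel{t}{t'}\D$ that is attached to that form of $\D$. The single engine in every case is transitivity of multi-step weak head reduction $\reds$ (together with its one-step closure) and, for the term part only, soundness of reduction plus the conversion rule to keep type annotations aligned. The cases where $\D$ is built over a type former already present at layer $\proglevel$ — namely $\Nat$, $\Ty l$, $\Pi$ types, neutral types, and the bureaucratic universe-level-equivalence rule — are literally the cases handled by the $j = \proglevel$ Weak Head Expansion lemma proved earlier, so those sub-proofs port over verbatim; only the five genuinely new layer-$\metalevel$ formers $\CTyp l$, $\CTrm T l$, $\CPI g l T$, $\TPI U l{l'}T$ and $\UPI \ell l T$ need a fresh, but entirely parallel, check.

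For the type part, the key observation is structural: every form of $\D$ contains exactly two ``reduction'' premises $\ttypreds \metalevel T W l$ and $\ttypreds \metalevel{T'}{W'}{l}$ exhibiting canonical forms $W, W'$, while all remaining premises — the typing of the components, the generic-equivalence premise $\lttypgeq \metalevel W{W'}l$, the stability premises such as $\ldSctxeqge \proglevel\proglevel\Delta{\Delta'}$ and $\ldStypeqge[\Psi][\Delta]\proglevel\proglevel{T_1}{T_1'}l$, and the Kripke function premises ($\E$, $\F$, $\E_1$, $\E_2$ in the $\Pi$, $\CPI g l T$, $\TPI U l{l'}T$, $\UPI \ell l T$ cases) — are phrased purely over $W$, $W'$ and their sub-structures, never over $T$ or $T'$. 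From $\ttypreds \metalevel{T_1}{T}{l}$ I extract $\lttypwf \metalevel{T_1}{l}$ and $T_1 \reds T$, compose with $T \reds W$ to obtain $T_1 \reds W$, hence $\ttypreds \metalevel{T_1}{W}{l}$; symmetrically $\ttypreds \metalevel{T_1'}{W'}{l}$. Reassembling these two new reduction premises with all the untouched premises of $\D$ yields a derivation $\E :: \ldtypeq \metalevel \metalevel{T_1}{T_1'}{l}$ of the very same shape as $\D$.

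For the term part, $\ldtyequiv \metalevel \metalevel{t}{t'}\D$ unfolds to two reduction clauses $\ttrmreds \metalevel t w W l$ and $\ttrmreds \metalevel{t'}{w'}{W'}{l}$ together with ``inner'' data (the generic equivalence between the weak head normal forms, and the case-specific structure: the $\Nat$-normal-form reducibility data, the marker that a contextual-type normal form is either a $\tbox$ed piece of code or a neutral, or the application / meta-application / universe-application Kripke clauses for the function-like formers). From $\ttrmreds \metalevel{t_1}{t}{T}{l}$ I have $\lttyping \metalevel{t_1}{T}{l}$ and $t_1 \reds t$; soundness of reduction applied to $T \reds W$ (contained in $\D$) gives $\lttypeq \metalevel T W l$, so by conversion $\lttyping \metalevel{t_1}{W}{l}$, and composing $t_1 \reds t \reds w$ gives $\ttrmreds \metalevel{t_1}{w}{W}{l}$; symmetrically for $t_1'$. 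Since none of the inner data mentions $T$, $T'$ or the pre-reduction terms, it is copied unchanged, and this rebuilds $\ldtyequiv \metalevel \metalevel{t_1}{t_1'}\D$; no sub-induction into the inner structures (e.g. the $\Nat$ or contextual normal-form predicates) is needed, as only the outermost reduction is modified.

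I do not expect a real obstacle. The argument never descends into the universe hierarchy, so the transfinite-recursion and well-foundedness subtleties behind the $j = \metalevel$ definition do not resurface; and Weak Head Expansion precedes Symmetry, the Irrelevance lemmas and Transitivity in the development, with only Escape and Reflexivity of Neutral before it (neither of which is invoked here), so there is no circularity. The only points needing care are the type-annotation bookkeeping in the term clauses — handled uniformly by reduction soundness and conversion — and a case-by-case confirmation, for the five new layer-$\metalevel$ type formers, that every Kripke premise carried over really is expressed over the canonical forms $W, W'$ and their sub-structures rather than over $T, T'$; it is, so the ``hard part'' is merely the breadth of the case split, not any individual difficult step.
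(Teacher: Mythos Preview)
Your proposal is correct and follows exactly the paper's approach: induction on $\D$ together with transitivity of multi-step reduction. The paper's own proof is the single line ``Induction on $\D$. Use transitivity of multi-step reductions,'' so your elaboration of the conversion bookkeeping for the term clauses and the observation that only the two outermost reduction premises mention $T,T'$ is simply a faithful unpacking of that one-liner.
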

\begin{proof}
  Induction on $\D$. %
  Use transitivity of multi-step reductions.
\end{proof}

\begin{lemma}[Symmetry] $ $
  \begin{itemize}
  \item If $\D :: \ldtypeq \metalevel \metalevel{T}{T'}{l}$, then $\E :: \ldtypeq \metalevel \metalevel{T'}{T}{l}$.
  \item If $\ldtyequiv \metalevel \metalevel{t}{t'}\D$, then $\ldtyequiv \metalevel \metalevel{t'}{t}\E$. 
  \end{itemize}
\end{lemma}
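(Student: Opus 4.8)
The plan is to prove both parts simultaneously by induction on the derivation $\D :: \ldtypeq \metalevel \metalevel{T}{T'}{l}$, using the proof of the symmetry lemma for $j = \proglevel$ in \Cref{sec:dt:logrel-prop} as the template. Every type former that already lives at layer $\proglevel$ — natural numbers, universes, $\Pi$ types, neutral types, and the bookkeeping universe-level clause — has exactly the same defining clauses at $j = \metalevel$ as at $j = \proglevel$, and those clauses were designed with symmetry in mind (in particular each term relation records $w \sim w'$ under both of the two types involved), so those cases port essentially verbatim: one reads off the premises, swaps $T$ with $T'$, and concludes using the PER law of the generic equivalence together with the inductive hypotheses. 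The only genuinely new work is for the type formers available only at layer $\metalevel$: the contextual types $\CTyp[\Delta]l$ and $\CTrm[\Delta]{T_1}l$, the meta-function types $\CPI g l{T_1}$ and $\TPI U[\Delta]{l}{l'}{T_1}$, and the universe-polymorphic function type $\UPI \ell l{T_1}$.

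I would first dispatch the two contextual-type cases, which are almost immediate because their defining clauses are already built symmetrically. For $\CTyp$, inverting $\D$ yields $\ldSctxeqge \proglevel \proglevel \Delta{\Delta'}$ and $\lttypgeq \metalevel {\CTyp[\Delta] l}{\CTyp[\Delta'] l}{0}$; symmetry of $\ldSctxeqge$ (a PER, from \Cref{sec:dt:semglob}) and symmetry of the generic type equivalence produce the swapped derivation $\E$. For the term part, the auxiliary relation $\ldtynfeqm$ treats $\CTyp[\Delta]l$ as a sum type: a normal form is either a $\tbox$'ed type, in which case \Cref{lem:dt:static} forces the two sides to be syntactically equal so that the clause $\ldtypwf[\Psi][\Delta] \codelevel \proglevel {T_1} l$ is self-symmetric, or a neutral, in which case we apply the PER law to $\lttrmgneeq$; the duplication of $\lttrmgeq \metalevel w{w'}{\CTyp[\Delta] l}{0}$ under both $\Delta$ and $\Delta'$ makes the remaining bookkeeping trivial. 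The $\CTrm$ case is the same, additionally observing that $\D$ already carries both $\ldStypeqge[\Psi][\Delta] \proglevel \proglevel{T_1}{T_1'} l$ and its mirror image, so the swapped side is available with no extra effort.

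The main obstacle is the meta-function and universe-polymorphic cases, where the codomain relation is a function of an argument and the functional premise (call it $\E$, or $\E_1$ and $\E_2$ for $\TPI$) must be turned around. For $\CPI$: given a weakening $\psi$ and a proof that $\Delta$ and $\Delta'$ are related for layers $\ge \proglevel$, I would apply symmetry of that relation to obtain the version with $\Delta$ and $\Delta'$ exchanged, feed it into $\E$, and then apply the inductive hypothesis to the resulting codomain derivation; the term side follows from the duplicated $\lttrmgeq$ premises on $w, w'$. The $\TPI$ case is handled identically, and is set up precisely so that $\E_1$ and $\E_2$ are mirror images of each other: proving symmetry exchanges a use of $\E_1$ for a use of $\E_2$ and vice versa, after first applying symmetry of $\ldStypeqge$ to the supplied argument and the inductive hypothesis to the codomain. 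For $\UPI$ one additionally needs symmetry of the universe-level equivalence to exchange the level vectors $\vect l$ and $\vect l'$ before instantiating $\E$ and invoking the inductive hypothesis; the transfinite aspect (the $\omega$ level) plays no special role beyond the well-foundedness of the recursion already arranged in \Cref{sec:dt:klogrel}. Because the whole argument is a single induction on $\D$ with the term statement proved alongside the type statement, the recursive calls on codomain derivations are all legitimate.
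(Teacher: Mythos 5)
Your proposal is correct and follows the same approach as the paper's (very terse) proof: induction on $\D$, with the observation that the clauses were designed for symmetry — in particular that the term relations carry duplicated generic-equivalence premises under both types, and that for $\TPI$ the premises $\E_1$ and $\E_2$ are exactly the two mirror halves. One small remark on the $\TPI$ case: once you notice that $\E_1$ and $\E_2$ literally exchange under the swap of $T$ and $T'$, you do not additionally need to invoke symmetry of $\ldStypeqge$ and the inductive hypothesis there — the exchange alone already produces the required $\E_1'$ and $\E_2'$ (that "exchange for free" is precisely why both are recorded in the definition, and is what the paper means by saying $\E_1$ and $\E_2$ are duplicated to make symmetry easy). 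The sym-plus-IH manoeuvre is the right move for the single-$\E$ case $\CPI$ and for $\UPI$; for $\TPI$ it is redundant.
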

\begin{proof}
  Induction on $\D$. %
  Symmetry holds by design. %
  The verbose case is the meta-functions for types. %
  Effectively, $\E_1$ and $\E_2$ are duplicated to make sure that symmetry can be easily proved.
\end{proof}

\begin{lemma}[Right Irrelevance]
  If $\D :: \ldtypeq \metalevel \metalevel{T}{T'}{l}$, $\E :: \ldtypeq \metalevel \metalevel{T}{T''}{l}$ and
  $\ldtyequiv \metalevel \metalevel{t}{t'}\D$, then $\ldtyequiv \metalevel \metalevel{t}{t'}\E$.
\end{lemma}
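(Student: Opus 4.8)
The plan is to follow the proof of Right Irrelevance for $j = \proglevel$ almost verbatim where possible: I would proceed by induction on $\D$ and then invert $\E$, using the determinacy lemma to force $T$ to reduce to \emph{one and the same} weak head normal form in both $\D$ and $\E$. For every case in which that normal form is a type already present at layer $\proglevel$ — namely $\Nat$, a universe, a $\Pi$ type, a neutral type, or the bureaucratic universe-level-equivalence clause — none of the defining conditions mentions $j$, so the argument is literally the one already carried out above; in particular the $\Pi$ case again uses reflexivity together with irrelevance on the domain to rewrite the type annotations $S_3,S_4,T_3,T_4$ and then the IH on the strictly smaller derivations. Thus the only genuinely new work is in the five clauses specific to $j = \metalevel$.

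For the contextual types $\CTyp[\Delta]l$ and $\CTrm[\Delta]{T_1}l$, determinacy already tells us that $T$ reduces to the \emph{same} contextual type in $\D$ and in $\E$, so the indices $\Delta$ (and $T_1$) coincide. Unfolding $\ldtyequiv \metalevel \metalevel{t}{t'}\D$ reduces $t$ and $t'$ to weak head normal forms $w,w'$ and asserts the auxiliary judgment $\ldtynfeqm{w}{w'}{\CTyp[\Delta]l}$ (respectively $\ldtynfeqm{w}{w'}{\CTrm[\Delta]{T_1}l}$), which in turn only stores the layer-$\codelevel$ predicate $\ldtypwf[\Psi][\Delta]\codelevel\proglevel{T_1}l$ or $\ldtyping[\Psi][\Delta]\codelevel\proglevel{t_1}{T_1}l$; all of this is completely independent of whether we started from $\D$ or from $\E$. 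Hence the witness transfers with nothing to prove.

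For the two meta-function types $\CPI g l{T_1}$ and $\TPI U[\Delta]{l}{l'}{T_1}$ and for the universe-polymorphic type $\UPI \ell l{T_1}$, determinacy again pins down the head normal form, so the syntactic parts agree, and I would feed $\E$ exactly the same ``argument data'' that $\ldtyequiv \metalevel \metalevel{t}{t'}\D$ consumes: a related pair of local contexts $\ldSctxeqge \proglevel \proglevel \Delta{\Delta'}$ for $\CPI$, a related pair of types $\ldStypeqge \proglevel \proglevel{T_2}{T_2'}l$ for $\TPI$, or a matching list of universe levels for $\UPI$. The key observation is that, unlike the argument relation $\E$ in the $\Pi$ case, these argument relations carry no derivation, so there is no reflexivity/irrelevance gymnastics on the argument side; only the output-type derivations differ (say the one drawn from $\D$ versus the one drawn from $\E$), and the IH — right irrelevance applied to those strictly smaller derivations — transfers the relation on the two applications from one to the other. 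For $\TPI$ one simply does this twice, once for each of the symmetric premises $\E_1$ and $\E_2$; for $\UPI$ one additionally threads the level equalities $\tyequiv[L]{\vect l(n)}{\vect l'(n)}\Level$ through unchanged, and the transfinite shape of the recursion at $\omega$ is merely inherited.

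I expect the main obstacle to be bookkeeping rather than mathematics: in the $\CPI$, $\TPI$, and $\UPI$ cases one must present the weakening $\psi$ and the argument data in exactly the form required so that the IH applies to the matched pair of output-type derivations, and for $\TPI$ one must not forget either of the duplicated premises. Compared with the $\Pi$ clause at $j = \proglevel$ — which is where the only real subtlety lives and which we inherit for free — these new clauses are actually lighter, precisely because their argument relations are derivation-free. Once all five cases are dispatched, the statement follows.
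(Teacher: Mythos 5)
Your proposal is correct and follows essentially the same route as the paper's proof: induction on $\D$, reuse of the overlapping $j=\proglevel$ cases, observing that contextual types store nothing indexed by the derivation, and dispatching the meta-function and universe-polymorphic cases by a straightforward IH on the output-type derivations (noting, as the paper does, that these are easier than $\Pi$ because their argument data is derivation-free). The paper's proof is terse; your elaboration fills in exactly the details the paper alludes to.
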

\begin{proof}
  Induction on $\D$. %
  Again, the overlapping cases for $j \in \{\proglevel, \metalevel\}$ can still be ported immediately,
  with only layers changed. %
  For contextual types, it is obvious as the logical relations do not depend on the
  premises at all. %
  The remaining cases are meta-functions and universe-polymorphic functions. %
  These cases are simpler than that of dependent functions because they only take
  simple IH to go through. 
\end{proof}

\begin{lemma}[Left Irrelevance]
  If $\D :: \ldtypeq \metalevel \metalevel{T'}{T}{l}$, $\E :: \ldtypeq \metalevel \metalevel{T''}{T}{l}$ and
  $\ldtyequiv \metalevel \metalevel{t}{t'}\D$, then $\ldtyequiv \metalevel \metalevel{t}{t'}\E$.
\end{lemma}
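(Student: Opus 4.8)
The plan is to follow exactly the strategy that settled the companion statement at $j = \proglevel$: reduce Left Irrelevance to Symmetry and Right Irrelevance, both of which have just been established at $i = j = \metalevel$. Since the logical relation at layer $\metalevel$ was deliberately set up with symmetry as a built-in structural feature of each defining clause, flipping a derivation costs us nothing, and the whole argument is a short diagram chase rather than a genuine induction.

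First I would apply the Symmetry lemma to the given $\D :: \ldtypeq \metalevel \metalevel{T'}{T}{l}$, obtaining a derivation $\D_1 :: \ldtypeq \metalevel \metalevel{T}{T'}{l}$ together with the transported witness $\ldtyequiv \metalevel \metalevel{t'}{t}{\D_1}$ --- this is precisely what Symmetry delivers from the hypothesis $\ldtyequiv \metalevel \metalevel{t}{t'}\D$. Applying Symmetry to $\E :: \ldtypeq \metalevel \metalevel{T''}{T}{l}$ likewise yields $\E_1 :: \ldtypeq \metalevel \metalevel{T}{T''}{l}$. Now $\D_1$ and $\E_1$ both have $T$ on the left, so Right Irrelevance applies verbatim and produces $\ldtyequiv \metalevel \metalevel{t'}{t}{\E_1}$.

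Finally I would symmetrize once more, turning $\ldtyequiv \metalevel \metalevel{t'}{t}{\E_1}$ into $\ldtyequiv \metalevel \metalevel{t}{t'}{\E_2}$ for some $\E_2 :: \ldtypeq \metalevel \metalevel{T''}{T}{l}$, and then close the gap between $\E_2$ and the $\E$ given in the statement --- which relate the very same ordered pair of types --- by one more appeal to Right Irrelevance. This bridging step is the only point requiring a moment's care: the symmetry operation on derivations is not literally involutive, only involutive up to irrelevance, so we cannot identify $\E_2$ with $\E$ on the nose and must move the witness over via irrelevance. I do not expect a real obstacle here; the only bookkeeping is tracking which side of each relation is being held fixed at each invocation. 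This is exactly why the $j = \proglevel$ version could be dispatched with the single sentence ``immediate by symmetry and right irrelevance,'' and the $i = j = \metalevel$ version is no different.
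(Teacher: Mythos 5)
Your proposal is correct and follows the same approach as the paper, which simply says ``Immediate by symmetry and right irrelevance.'' You spell out the bookkeeping more explicitly (in particular, noting that the symmetry operation is only involutive up to irrelevance, so a final bridging application of Right Irrelevance between $\E_2$ and $\E$ is needed), but that is exactly what the paper's one-line proof implicitly relies on.
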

\begin{proof}
  Immediate by symmetry and right irrelevance. 
\end{proof}

\begin{lemma}[Reflexivity and Transitivity] $ $
  \begin{itemize}
  \item If $\D_1 :: \ldtypeq \metalevel \metalevel{T_1}{T_2}{l}$ and $\D_2 :: \ldtypeq \metalevel
    \metalevel{T_2}{T_3}{l}$, then $\D_3 :: \ldtypeq \metalevel \metalevel{T_1}{T_3}{l}$.
  \item If $\E :: \ldtypeq \metalevel \metalevel{T_1}{T_1}{l}$, $\ldtyequiv \metalevel \metalevel{t_1}{t_2}{\D_1}$ and
    $\ldtyequiv \metalevel \metalevel{t_2}{t_3}{\D_2}$, then $\ldtyequiv \metalevel \metalevel{t_1}{t_3}{\D_3}$.
  \item $\F :: \ldtypeq \metalevel \metalevel{T_1}{T_1}{l}$.
  \item If $\ldtyequiv \metalevel \metalevel{t_1}{t_2}{\D_1}$, then $\ldtyequiv \metalevel \metalevel{t_1}{t_1}{\F}$.
  \end{itemize}
\end{lemma}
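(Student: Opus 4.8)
The plan is to prove all four bullets simultaneously by mutual induction on $\D_1$, inverting $\D_2$ and using determinacy of multi-step weak head reduction to force the two derivations to reduce both $T_1, T_2$ and $T_2, T_3$ to matching head normal forms. Reflexivity of related types ($\F :: \ldtypeq \metalevel \metalevel{T_1}{T_1}{l}$) then drops out of transitivity composed with symmetry, which is already available, and reflexivity of related terms drops out of reflexivity and transitivity of the witnessing type derivation together with right irrelevance; so the genuine content lies in the two transitivity halves. This is exactly the bootstrapping used in the $j = \proglevel$ version in \Cref{sec:dt:logrel-prop}, and in fact every case whose logical relation is also available when $j = \proglevel$ — $\Nat$, the universes, $\PI{l}{l'}x S T$, neutral types, and the bureaucratic universe-level equivalence — can be ported verbatim from that proof with the layer index changed from $\proglevel$ to $\metalevel$. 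In particular the dependent-function case, which is the subtle one there (requiring the two-step bridging through the middle type together with irrelevance), is simply transported. So I would only write out the genuinely new cases: the contextual types $\CTyp[\Delta]l$ and $\CTrm[\Delta]{T_1}l$, the two meta-function types $\CPI g l{T_1}$ and $\TPI U[\Delta]{l}{l'}{T_1}$, and the universe-polymorphic type $\UPI \ell l{T_1}$.

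The contextual-type cases are the lightest of the new ones. The data carried by $\D$ for $\CTyp$ is a $\lttypgeq$ witness plus $\ldSctxeqge \proglevel \proglevel \Delta{\Delta'}$, and for $\CTrm$ additionally $\ldStypeqge \proglevel \proglevel{T_1}{T_1'}l$ witnesses; all of these are PERs by the PER lemmas of \Cref{sec:dt:semglob}, so transitivity of types is immediate by composing them, while the static code lemma (\Cref{lem:dt:static}) pins the boxed sub-structures appearing in $\D_1$ and $\D_2$ to be syntactically equal. For the term level I would unfold $\ldtynfeqm$: the neutral sub-case closes by the PER law of the generic equivalence, and the $\boxit{t_1}$ sub-case uses static code to see the codes coincide and then the reflexivity (= PER) of the $\ldtypwf \codelevel \proglevel{\cdot}{l}$ and $\ldtyping \codelevel \proglevel{\cdot}{T_1}{l}$ predicates, which follows from \Cref{lem:dt:ldc-pm} and the lemmas it builds on. The meta-function cases $\CPI$ and $\TPI$ mirror the dependent-function case but without the type-annotation bookkeeping: to bridge $T_1[\Delta/g]$ with $T_1''[\Delta''/g]$ through $T_1'[\Delta'/g]$ one first invokes reflexivity of $\ldSctxeqge \proglevel \proglevel \Delta{\Delta}$ (resp. of $\ldStypeqge \proglevel \proglevel{T_2}{T_2}l$) plus right irrelevance to align the witness families, then applies the inductive transitivity; for $\TPI$ one must keep both the $\E_1$ and $\E_2$ components and use symmetry to manufacture the flipped one, just as symmetry was already arranged to duplicate them.

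The main obstacle I expect is the $\UPI$ case together with the well-foundedness of the recursion it sits in. Because a universe-polymorphic function lives at the non-finite level $\omega$, the relation at $\UPI \ell l{T_1}$ is defined by transfinite recursion, so the induction on $\D_1$ genuinely descends through an $\omega$-indexed step: to bridge $T_1[\vect l/\vect\ell]$ with $T_1''[\vect l''/\vect\ell]$ one must pick the intermediate instantiation $\vect l'$, feed the family $\E$ with reflexivity of the universe-level equivalences $\tyequiv[L']{\vect l(n)}{\vect l(n)}\Level$, and then close with the inductive hypothesis at the instantiated level $l[\vect l/\vect\ell]$. The care needed is to check that the recursion measure — first the layer index $j$, then the transfinite well-order on universe levels established in \Cref{sec:dt:ulevel}, then the structural order on derivations — strictly decreases at this step; this is the one place where well-foundedness of universe levels must be used explicitly rather than taken for granted, and it is also where I would be most careful that the inductive-recursive definition from \Cref{sec:dt:klogrel} is actually being consumed in a legitimate order.
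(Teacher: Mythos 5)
Your proof matches the paper's approach exactly: induction on $\D_1$, inversion of $\D_2$ using determinacy of reduction, porting the $j = \proglevel$ cases, and discharging the $\metalevel$-only type constructors via the layer-$\proglevel$ PER lemmas already in hand, with reflexivity recovered from symmetry, transitivity, and irrelevance at the end. The paper's one additional observation, which simplifies the part you expect to be hardest, is that the meta-programming cases are actually \emph{easier} than the dependent-function case rather than mirroring it: the data fed to the Kripke premise $\E$ in those cases (related $\Delta$'s, related input types at layer $\proglevel$, or componentwise-equivalent universe level vectors) are already PERs established \emph{outside} this mutual induction, so no reflexivity-of-related-terms-plus-irrelevance bootstrap is required there, and your transfinite-descent concern for the universe-polymorphic case is absorbed by the fact that $\E(\psi,\vect l,\vect{l'})$ is a structurally smaller, non-$\omega$ sub-derivation of $\D_1$ in the inductive-recursive ordering, so the ordinary induction on $\D_1$ already suffices.
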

\begin{proof}
  We do induction on $\D_1$ and then invert $\D_2$. %
  The cases available when $j = \proglevel$ still can be ported to this lemma. %
  In fact, all remaining cases are simpler than the case for dependent functions. %
  This is because for dependent functions, we must consider reflexivity of related
  input arguments, where for all new cases for meta-programming, we only need
  transitivity from layer $\proglevel$, which has been an established fact at this point. %
\end{proof}

At last, we must prove one very important lemma which connects the logical relations
when $j$ takes different values. %
We need this lemma to explain the fact that a variable at a lower layer suddenly can
be substituted by a term that is not well-typed at its original layer. %
This phenomenon typically occurs when we run code from MLTT at layer $\metalevel$. %
In this case, a variable originally only expected to be substituted by a term from
MLTT must also be able to handle a term only available at layer $\metalevel$, which might
contain, for example, a recursion principle for code. %
A similar lemma occurs in \Cref{sec:cv,sec:cv:logrel}, and a dependently typed version
must also be proved here.
\begin{lemma}[Layering Restriction] $ $
  \begin{itemize}
  \item If $\D :: \ldtypeq \metalevel \proglevel T{T'}l$, then $\E :: \ldtypeq \metalevel \metalevel T{T'}l$.
  \item The following two relations are equivalent:
    \[
      \ldtyequiv \metalevel \proglevel t{t'}\D \tand \ldtyequiv \metalevel \metalevel t{t'}\E
    \]
  \end{itemize}
\end{lemma}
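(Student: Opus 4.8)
The plan is to prove both clauses simultaneously, by the same recursion that defines the logical relations --- the well-founded recursion on the universe level refining the induction--recursion on $\D$. The key observation is that every clause of $\ldtypeq \metalevel \metalevel T{T'}l$ whose conclusion can be met when $T$ and $T'$ weak-head reduce to an MLTT type, namely the clauses for $\Nat$, for universes, for $\PI{l}{l'}x S T$, for neutral types, and the bureaucratic universe-level conversion clause, is verbatim (with $j$ changed) the corresponding clause of $\ldtypeq \metalevel \proglevel T{T'}l$; the same holds for the associated term relations. Since a derivation $\D :: \ldtypeq \metalevel \proglevel T{T'}l$ can only use these five clauses --- the clauses for contextual types, meta-functions, and universe-polymorphic function types live only at $j=\metalevel$ and never arise --- the induction on $\D$ simply re-assembles each node with its sub-derivations lifted by the induction hypothesis, so $\E$ is produced canonically and no new clause is ever invoked.

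I would then dispatch the cases. For $\Nat$ and for neutral types the type clauses and their term relations do not mention $j$ at all, so both clauses of the lemma are immediate. For universes, the clause carries a nested $\ldtypeq \metalevel \proglevel{\Elt l w}{\Elt l{w'}}l$ at a strictly smaller universe level, to which the (well-founded) induction hypothesis applies, yielding the nested $\ldtypeq \metalevel \metalevel{\Elt l w}{\Elt l{w'}}l$ required to assemble $\E$; the two term relations unfold to the same weak-head-reduction data plus this nested type relation, and the inner induction hypothesis gives their equivalence. For the universe-level conversion clause I would strip it, apply the induction hypothesis to the strictly smaller inner derivation (the level is merely renamed to an equivalent one, so $\omega$ cannot enter while $j=\proglevel$), and re-apply the clause.

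The substantive case is $\PI{l}{l'}x S T$. From $\D$, with its Kripke domain sub-derivation and its Kripke codomain sub-derivation, I would build $\E$ by taking, under each weakening $\psi$, the domain sub-derivation of $\E$ to be the lift of that of $\D$ --- legitimate since the latter is again at $j=\proglevel$ and the type half of the induction hypothesis applies after weakening --- and, for each related argument pair $s, s'$, the codomain sub-derivation of $\E$ to be the lift of that of $\D$, where the witness of relatedness at $j=\metalevel$ corresponds to the one at $j=\proglevel$ through the term half of the induction hypothesis. For the equivalence of term relations, I would unfold both sides: each demands that $t$ and $t'$ weak-head reduce to normal forms that are generically equivalent (data independent of $j$) together with the Kripke requirement that, under any $\psi$ and any related argument pair, the application terms $w~s$ and $w'~s'$ be related by the appropriately instantiated codomain sub-derivation. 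These two Kripke requirements are interderivable by the term half of the induction hypothesis at the codomains, combined with the right- and left-irrelevance lemmas of \Cref{sec:dt:logrel-prop} to reconcile the possibly distinct derivation witnesses. The backward direction of each equivalence is the same correspondence of derivations read in reverse and needs nothing more.

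I expect the $\PI$-type case to be the main obstacle, for the familiar induction--recursion reason: the type relation and the term relation must be lifted in lockstep, and the term-relation lifting has to be phrased with enough generality --- quantified over all weakenings and all related arguments --- to be fed back into the Kripke clauses of the type relation. The irrelevance lemmas absorb most of the witness-matching bookkeeping, but care is needed that the $\E$ exhibited by the first clause is exactly the derivation referenced by the second clause, so that the two halves of the simultaneous induction genuinely compose; and one must keep the universe-level recursion visibly decreasing in the universe clause, which is unproblematic here since $\omega$ never appears while $j=\proglevel$.
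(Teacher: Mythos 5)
Your proof is correct and follows essentially the same route as the paper: induction on $\D$, observing that the meta-programming clauses (contextual types, meta-functions, universe-polymorphic functions) can never occur at $j=\proglevel$, so only the MLTT clauses need lifting, with the $\Pi$-type case being the substantive one and handled by simultaneously lifting the type and term halves via the iff in the second clause. The paper is slightly terser about the witness-matching --- it appeals to "applying IH to use the iff" without spelling out the irrelevance reasoning --- but your explicit invocation of left/right irrelevance is exactly what that step amounts to, so the two arguments coincide in substance.
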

The idea of this lemma is that, if we know a type is coming from MLTT only, then it is
possible to regard its term as a term at both layers $\proglevel$ and $\metalevel$. %
The direction going from $\proglevel$ to $\metalevel$ should be intuitive; it resembles the lifting
lemma on the syntactic side. %
The backward direction, however, might appear counter-intuitive. %
Yet, if we consider the example we discussed above, then we should consider this lemma
describing a process of bringing a term from $\metalevel$ back to $\proglevel$, performing the
substitution, and then finally lifting the result back to $\metalevel$. 
\begin{proof}
  We do induction on $\D$. %
  Since we know $T$ and $T'$ are related when $j = \proglevel$, we do not need to consider the
  cases for meta-programming. %
  The most complex case is the function case. %
  In this case, we have premises
  \begin{align*}
    \D_1 &:: (\forall~ \psi :: L' \sep \Phi;\Delta \To_m L \sep \Psi; \Gamma ~.~
           \ldtypeq[\Phi][\Delta][L'] \metalevel \proglevel{S_1}{S_2}{l_1}) \\
    \D_2 &:: (\forall~ \psi :: L' \sep \Phi;\Delta \To_m L \sep \Psi; \Gamma \tand \ldtyequiv[\Phi][\Delta][L']
           \metalevel \proglevel {s}{s'}{\D_1(\psi)} ~.~
           \ldtypeq[\Phi][\Delta][L'] \metalevel \proglevel{T_1[s/x]}{T_2[s'/x]}{l_2})
  \end{align*}
  By determinacy, we know that
  \begin{align*}
    T \reds \PI{l_1}{l_2} x{S_1}{T_1} \\
    T' \reds \PI{l_1}{l_2} x{S_2}{T_2}
  \end{align*}
  must be unique.  %
  First we show
  \[
    \E_1 :: (\forall~ \psi :: L' \sep \Phi;\Delta \To_i L \sep \Psi; \Gamma ~.~
    \ldtypeq[\Phi][\Delta][L'] \metalevel \metalevel{S_1}{S_2'}{l_1})
  \]
  by a simple IH.

  Next we must show that given
  \begin{itemize}
  \item $\psi :: L' \sep \Phi;\Delta \To_m L \sep \Psi; \Gamma$,    
  \item $\Ac :: \ldtyequiv[\Phi][\Delta][L'] \metalevel \metalevel {s}{s'}{\E_1(\psi)}$
  \end{itemize}
  then
  \[
    \E_2 :: \ldtypeq[\Phi][\Delta][L'] \metalevel \metalevel{T_1[s/x]}{T_2'[s'/x]}{l_2})
  \]
  holds. %
  Notice that the goal is almost applicable for $\D_2$ except that $\Ac$ does not
  satisfy the required premise $\ldtyequiv[\Phi][\Delta][L'] \metalevel \proglevel
  {s}{s'}{\D_1(\psi)}$. %
  But this is fine as we apply IH to use the iff in the second statement to obtain the
  required premise. %

  For the second statement, we must establish an iff relation. %
  That boils down to a symmetric proof. %
  Then in this case, we are given
  \begin{itemize}
  \item $\psi :: L' \sep \Phi;\Delta \To_m L \sep \Psi; \Gamma$,    
  \item $\lttypgeq[\Phi][\Delta][L'] i{S_1}{S_3}l$,
  \item $\lttypgeq[\Phi][\Delta][L'] i{S_2}{S_4}l$, 
  \item $\lttypgeq[\Phi][\Delta, x : S_1 \at{l_1}][L'] i{T_1}{T_3}{l_2}$, and
  \item $\lttypgeq[\Phi][\Delta, x : S_2 \at{l_1}][L'] i{T_2}{T_4}{l_2}$,
  \end{itemize}
  and then we have to show the following equivalence:
  \[
    \Bc :: \ldtyequiv[\Phi][\Delta][L'] \metalevel \proglevel {s}{s'}{\D_1(\psi)}
  \]
  implying
  \[
    \ldtyequiv[\Phi][\Delta][L'] \metalevel \proglevel{\APP w{l_1}{l_2} x{S_3}{T_3} s}{\APP{w'}{l_1}{l_2}
      x{S_4}{T_4}{s'}}{\D_2(\psi, \Bc)}
  \]
  is equivalent to
  \[
    \ldtyequiv[\Phi][\Delta][L'] \metalevel \metalevel {s}{s'}{\E_1(\psi)}
  \]
  implying
  \[
    \ldtyequiv[\Phi][\Delta][L'] \metalevel \metalevel{\APP w{l_1}{l_2} x{S_3}{T_3} s}{\APP{w'}{l_1}{l_2}
      x{S_4}{T_4}{s'}}{\E_2}
  \]

  let us consider the inverse direction. %
  In this case, we assume $\ldtyequiv[\Phi][\Delta][L'] \metalevel \metalevel {s}{s'}{\E_1(\psi)}$. %
  By IH, we have $\ldtyequiv[\Phi][\Delta][L'] \metalevel \proglevel {s}{s'}{\D_1(\psi)}$. %
  From this, we further obtain
  \[
    \ldtyequiv[\Phi][\Delta][L'] \metalevel \proglevel{\APP w{l_1}{l_2} x{S_3}{T_3} s}{\APP{w'}{l_1}{l_2}
      x{S_4}{T_4}{s'}}{\D_2(\psi, \Bc)}
  \] %
  Another IH is sufficient to establish the goal. 
\end{proof}
This lemma, when combined with premises like $\ldStypwfge[\Psi][\Delta] \proglevel \proglevel T l$, is
the bridge to reveal the true complication of supporting lifting in \delamlang. %

We also need a counterpart for local contexts and local substitutions.
\begin{lemma}[Layering Restriction] $ $
  \begin{itemize}
  \item If $\D :: \ldctxeq \metalevel \proglevel{\Delta}{\Delta'}$, then $\E :: \ldctxeq \metalevel \metalevel{\Delta}{\Delta'}$.
  \item The following two relations are equivalent:
    \[
      \ldsubeq \metalevel \proglevel{\delta}{\delta'}\D \tand \ldsubeq \metalevel \metalevel {\delta}{\delta'}\E
    \]
  \end{itemize}
\end{lemma}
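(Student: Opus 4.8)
The plan is to prove both statements simultaneously by induction on the derivation $\D :: \ldctxeq \metalevel \proglevel{\Delta}{\Delta'}$, exactly mirroring the proof of the preceding Layering Restriction lemma for types and terms. The two base cases, where $\D$ witnesses $\ldctxeq \metalevel \proglevel \cdot\cdot$ or $\ldctxeq \metalevel \proglevel g g$, are immediate: the data carried by $\D$ there is only $\judge[L]\Psi$ (and, in the second case, $g : \Ctx \in \Psi$), which makes no mention of $j$, so the same witness serves as $\E :: \ldctxeq \metalevel \metalevel{\Delta}{\Delta'}$; moreover the conditions that pin down $\delta$ and $\delta'$ — that they equal $\cdot^{|\Gamma|}$, $\cdot^{|\Gamma|}_g$ or $\wk_g^{|\Gamma|}$ according to how $\Gamma$ ends — are literally identical at $j = \proglevel$ and $j = \metalevel$. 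Hence the required equivalence between $\ldsubeq \metalevel \proglevel{\delta}{\delta'}\D$ and $\ldsubeq \metalevel \metalevel{\delta}{\delta'}\E$ holds trivially in these cases.

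The substantive case is the cons rule, where $\D$ is assembled from $\F :: \forall\,\alpha ~.~ \ldctxeq[\Phi][L'] \metalevel \proglevel \Delta{\Delta'}$, a proof $\D_T :: \ldStypeq \F \metalevel \proglevel T{T'} l$ that $T$ and $T'$ stay related under all local substitutions drawn from $\F$, and $\tyequiv[L]l{l'}\Level$, with conclusion about $\Delta, x : T \at l$ and $\Delta', x : T' \at{l'}$. Applying the induction hypothesis to $\F$ at each weakening $\alpha$ yields $\F' :: \forall\,\alpha ~.~ \ldctxeq[\Phi][L'] \metalevel \metalevel \Delta{\Delta'}$ together with, at each $\alpha$, the two-way implication between $\ldsubeq \metalevel \proglevel{\delta_1}{\delta_1'}{\F(\alpha)}$ and $\ldsubeq \metalevel \metalevel{\delta_1}{\delta_1'}{\F'(\alpha)}$. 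Using that equivalence I can take any $\Cc' :: \ldsubeq[\Phi][\Gamma][L'] \metalevel \metalevel{\delta}{\delta'}{\F'(\alpha)}$, transport it back to a proof $\Cc$ at $j = \proglevel$, feed $\Cc$ to $\D_T$ to obtain relatedness of $T[\delta]$ and $T'[\delta']$ at $j = \proglevel$, and then invoke the Layering Restriction lemma for types and terms to upgrade it to $j = \metalevel$; collecting these over all $\alpha$ and $\Cc'$ gives $\D_T' :: \ldStypeq{\F'} \metalevel \metalevel T{T'} l$, and the same lemma simultaneously supplies the iff between $\ldtyequiv \metalevel \proglevel t{t'}{\D_T(\alpha,\Cc)}$ and $\ldtyequiv \metalevel \metalevel t{t'}{\D_T'(\alpha,\Cc')}$ (the context- and substitution-level irrelevance lemmas from \Cref{sec:dt:logrel-prop} being used to reconcile the different witnesses that relate the same substitutions). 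Packaging $\F'$, $\D_T'$ and $\tyequiv[L]l{l'}\Level$ through the cons rule produces $\E$; unfolding the definitions of $\ldsubeq \metalevel \proglevel{\delta}{\delta'}\D$ and $\ldsubeq \metalevel \metalevel{\delta}{\delta'}\E$ — each demands $\delta = \delta_1, t/x$, $\delta' = \delta_1', t'/x$, a pointwise proof that $\delta_1,\delta_1'$ are related at $\F$ (resp. $\F'$), and pointwise relatedness of $t,t'$ at $\D_T$ (resp. $\D_T'$) — and matching them component by component then yields the desired equivalence, each direction being a straightforward repackaging using the IH equivalence for the $\delta_1$-component and the type-level Layering Restriction iff for the $t$-component.

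The main obstacle is the same bookkeeping difficulty as in the type/term version: correctly threading the nested Kripke quantifiers so that the induction hypothesis and the previously proved Layering Restriction for types and terms are invoked exactly where they apply, and making sure that ``applying the IH under $\forall\,\alpha$'' is legitimate — the recursion is on the structure of $\D$, not on its quantified instances. One must also be careful that the witness $\Cc$ obtained by transporting $\Cc'$ across the IH equivalence and the witness actually required inside $\D_T$ are identified only up to the right-irrelevance lemma for related local substitutions. There is, however, no genuinely new difficulty: throughout the recursion on $\D$ we remain entirely within the $j = \proglevel$ sublanguage of types, so contextual types never appear and none of the transfinite-recursion or $\omega$-level subtleties of the full logical relation are relevant here.
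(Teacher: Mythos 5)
Your proof is correct and matches the paper's approach, which is simply ``Induction on $\D$; the step case is very similar to the function case above.'' Your detailed unfolding of the cons case — applying the IH to $\F$ under the $\forall\alpha$ quantifier, transporting related local substitutions across layers, invoking the preceding Layering Restriction lemma for types and terms to handle $\D_T$ and the term-level iff, and using irrelevance to reconcile witnesses — is exactly the intended argument, just spelled out where the paper leaves it implicit.
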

\begin{proof}
  Induction on $\D$. %
  The step case is very similar to function case above. %
\end{proof}

\subsection{Semantic Judgments and Fundamental Theorems}

After establishing all logical relations and their properties, we are ready for giving
the definitions for semantic judgments and then moving on to proving the fundamental
theorems. %
The semantic judgments intuitively should say that logical relations are stable under
all substitutions. %
More concretely, we have
\begin{align*}
  \lsgctx \Psi &:= \forall ~ \tyequiv[L']{\phi}{\phi'}L ~.~
                 \ldSgctxeq[L']{\Psi[\phi]}{\Psi[\phi']}
\end{align*}
$\lsgsubeq \sigma{\sigma'} \Phi$ is defined as $\lsgctx \Psi$, $\lsgctx \Phi$ and
given
\begin{itemize}
\item $\tyequiv[L']{\phi}{\phi'}L$, and
\item $\ldSgsubeq[\Psi'][L']{\sigma_1}{\sigma_1'}{\Psi[\phi]}$, 
\end{itemize}
then
\[
  \ldSgsubeq[\Psi'][L']{\sigma[\phi] \circ \sigma_1}{\sigma_1[\phi] \circ \sigma_1'}{\Phi[\phi]}
\]
We also define
\[
  \lsgsubst \sigma \Phi := \lsgsubeq \sigma\sigma \Phi
\]

$\lsctxeq i \Gamma{\Delta}$ where $i \in \{\proglevel, \metalevel \}$ is defined as $\lsgctx \Psi$ and given
\begin{itemize}
\item $\tyequiv[L']{\phi}{\phi'}L$, 
\item $\ldSgsubeq[\Phi][L'] \sigma{\sigma'}{\Psi[\phi]}$, and 
\item $k \ge i$,
\end{itemize}
then
\[
  \ldSctxeq[\Phi][L'] k{\typeof i} {\Gamma[\phi][\sigma]}{\Delta[\phi'][\sigma']}
\]
We also define
\[
  \lsctxwf i \Gamma := \lsctxeq i \Gamma \Gamma
\]

$\lstypeq i T{T'}l$ is defined as $\lsgctx \Psi$ and $\lsctxwf{\typeof i} \Gamma$ and given
\begin{itemize}
\item $\tyequiv[L']{\phi}{\phi'}L$, 
\item $\ldSgsubeq[\Phi][L'] \sigma{\sigma'}{\Psi[\phi]}$, 
\item $k \ge i$,
\item $\ldsubeq[\Phi][\Delta][L'] k{\typeof i} \delta{\delta'}{\Gamma[\phi][\sigma]}$, 
\end{itemize}
then
\[
  \ldtypeq[\Phi][\Delta][L']k{\typeof i}{T[\phi][\sigma][\delta]}{T'[\phi'][\sigma'][\delta']}{l[\phi]}
\]
We also define
\[
  \lstypwf i T l := \lstypeq i T{T}l
\]

$\lstyequiv i t{t'}T l$ is defined as $\lsgctx \Psi$ and $\lsctxwf{\typeof i} \Gamma$ and
$\lstypwf{\typeof i} T l$ given
\begin{itemize}
\item $\tyequiv[L']{\phi}{\phi'}L$, 
\item $\ldSgsubeq[\Phi][L'] \sigma{\sigma'}{\Psi[\phi]}$, 
\item $k \ge i$,
\item $\ldsubeq[\Phi][\Delta][L'] k{\typeof i} \delta{\delta'}{\Gamma[\phi][\sigma]}$, 
\end{itemize}
then
\[
  \ldtyequivt[\Phi][\Delta][L']k{\typeof i}{t[\phi][\sigma][\delta]}{t'[\phi'][\sigma'][\delta']}{T[\phi][\sigma][\delta]}{l[\phi]}
\]
We also define
\[
  \lstyping i t T l := \lstyequiv i t{t}T l
\]

$\lssubeq i \delta{\delta'}{\Delta}$ is defined as $\lsgctx \Psi$ and
$\lsctxwf{\typeof i} \Gamma$ and $\lsctxwf{\typeof i} \Delta$ given
\begin{itemize}
\item $\tyequiv[L']{\phi}{\phi'}L$, 
\item $\ldSgsubeq[\Phi][L'] \sigma{\sigma'}{\Psi[\phi]}$, 
\item $k \ge i$,
\item $\ldsubeq[\Phi][\Delta'][L'] k{\typeof i} {\delta_1}{\delta_1'}{\Gamma[\phi][\sigma]}$, 
\end{itemize}
then
\[
  \ldsubeq[\Phi][\Delta'][L'] k{\typeof i}{\delta[\phi][\sigma] \circ \delta_1}{\delta'[\phi'][\sigma'] \circ \delta_1'}{\Delta[\phi][\sigma]}
\]
We also define
\[
  \lssubst i \delta{\Delta} := \lssubeq i \delta{\delta}{\Delta}
\]

Notice that for definitions above, when $k \in \{\varlevel, \codelevel \}$ is possible, we can use the
local substitution lemma and ignore the local substitutions completely. %
This will be a frequent pattern in the proofs of the semantic rules. 

Summarizing the semantic judgments, we shall arriving at our statement of the
fundamental theorems:
\begin{theorem}[Fundamental] $ $
  \begin{itemize}
  \item If $\judge[L]\Psi$, then $\lsgctx \Psi$. 
  \item If $\lpjudge i \Gamma$ and $i \in \{\proglevel, \metalevel\}$, then $\lsctxwf i \Gamma$. 
  \item If $\lpequiv i \Gamma\Delta$ and $i \in \{\proglevel, \metalevel\}$, then $\lsctxeq i \Gamma{\Delta}$. 
  \item If $\lttypwf i T l$, then $\lstypwf i T l$. 
  \item If $\lttypeq i T{T'} l$, then $\lstypeq i T{T'} l$. 
  \item If $\lttyping i t T l$, then $\lstyping i t T l$.
  \item If $\lttyequiv i t{t'} T l$, then $\lstyequiv i t{t'} T l$.
  \item If $\ltsubst i \delta \Delta$, then $\lssubst i \delta{\Delta}$. 
  \item If $\ltsubeq i\delta{\delta'}\Delta$, then $\lssubeq i \delta{\delta'}{\Delta}$.
  \end{itemize}
\end{theorem}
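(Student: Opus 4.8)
The plan is to prove all nine statements by one simultaneous mutual induction on the derivations of $\judge[L]\Psi$, $\lpjudge i\Gamma$, $\lpequiv i\Gamma\Delta$, $\lttypwf iTl$, $\lttypeq iT{T'}l$, $\lttyping itTl$, $\lttyequiv it{t'}Tl$, $\ltsubst i\delta\Delta$ and $\ltsubeq i\delta{\delta'}\Delta$, following the schema of the $2$-layered fundamental theorem of \Cref{sec:cv:logrel} together with \citet{abel_decidability_2017}. Each case opens by unfolding the relevant semantic judgment: we fix equivalent universe substitutions $\tyequiv[L']\phi{\phi'}L$, then related global substitutions $\ldSgsubeq[\Phi][L']\sigma{\sigma'}{\Psi[\phi]}$, then a layer $k \ge i$, then related local substitutions $\ldsubeq[\Phi][\Delta][L']k{\typeof i}\delta{\delta'}{\Gamma[\phi][\sigma]}$, and reduces the goal to exhibiting the appropriate member of the Kripke logical relation for the fully substituted type or term. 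The substitution algebra lemmas (\Cref{lem:dt:interact} and its companions) and the equivalent-substitution lemmas (\Cref{lem:dt:lsubst-eq,lem:dt:gsubst-eq}) are used throughout to commute the three kinds of substitution and to realign the left-biased types; when $k\in\{\varlevel,\codelevel\}$ enters the picture we drop the local substitutions by the local substitution lemma, exactly as in \Cref{sec:cv:logrel}.

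The context and MLTT cases are dispatched first. Well-formedness of global and local contexts follows from the definitions of the semantic-context relations together with the reflexive local identity corollary (\Cref{lem:dt:refl-lid}) and the reflexive global identity corollary, and the monotonicity of $\ttypeof$; the cons cases simply feed the inductive hypotheses into the stable auxiliary judgments built into those definitions. For the MLTT types $\Nat$, $\Ty l$, $\PI l{l'} x S T$, $\Elt lt$, $\UPI \ell l T$ and the MLTT terms we proceed constructor by constructor, invoking the \textbf{Type Constructors} and \textbf{Congruence} laws of the generic equivalence to supply the outer generic-equivalence witnesses, and building the Kripke closures of the $\Pi$ case from the inductive hypotheses plus the built-in weakening and local-substitution stability of the relations. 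The recursor $\telimn$ for natural numbers is handled as in \Cref{sec:cv:logrel}: assemble the semantic motive, then case-split on the reducibility witness of the scrutinee into $\ze$, successor and neutral subcases, using \textbf{Weak Head Expansion} and the \textbf{Congruence for Neutrals} law. Conversion rules for terms and for universe levels are discharged by \textbf{Right} and \textbf{Left Irrelevance} and the universe-level-equivalence case of the logical relation.

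The genuinely modal cases --- $\boxit T$, $\boxit t$, the introduction and elimination of the $\CPI g l T$, $\TPI U[\Delta] l{l'} T$ and $\UPI \ell l T$ meta-functions, both forms of $\tletbox$, and the two recursive principles for code --- are where the work concentrates. The contextual-type formation cases lean on the semantic global-context machinery of \Cref{sec:dt:semglob}: the logical relations for $\CTyp[\Delta]l$, $\CTrm[\Delta]T l$ and the two meta-function types refer to the $\ge\proglevel$-stable relations on contexts, types and terms, which we obtain by instantiating the inductive hypotheses at $j=\proglevel$ and then applying \Cref{lem:dt:ldc-pm}; the $\boxit T$ and $\boxit t$ typing cases additionally need the inductive bookkeeping judgments $\ldtypwf\codelevel\proglevel Tl$ and $\ldtyping\codelevel\proglevel tTl$, which are themselves produced by a side induction on the layer-$\codelevel$ typing derivation, recording both the syntactic structure and the semantic data of every sub-structure. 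For $\tletbox$ we mirror the $2$-layered argument: after substitution the scrutinee reduces to a weak head normal form of a contextual type, and we case-split on whether it is a boxed form --- apply the $\beta$ rule, lift the enclosed code from layer $\codelevel$ to $\metalevel$ by \textbf{Layering Restriction}, then weak head expand --- or a neutral --- use the neutral congruence laws together with \textbf{Reflexivity of Neutral}.

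The main obstacle will be the two recursive principles for code. After substitution the scrutinee of type $\CTrm[\Delta]T{l'}$ (respectively $\CTyp[\Delta]{l'}$) reduces to one of: a neutral term, a $\tbox$'ed global variable $\boxit{u^\delta}$ or $\boxit{U^\delta}$, or $\boxit{t'}$ with $t'$ concrete code. In the first two cases we close by the \textbf{Neutral Recursion on Code} law after establishing that all motives and branches are related (from the inductive hypotheses) and converting the index type via generic type conversion. In the last case we must (i) use the coverage and progress lemmas of \Cref{sec:dt:syn:cov} to pin down which $\beta$ rule fires and to force the indices $\Delta$ and $T$ into the shape that rule demands, and (ii) actually supply the recursive calls appearing in the chosen branch, which requires a nested --- and necessarily itself mutual --- induction on the layer-$\codelevel$ typing derivation of $t'$; this is exactly why that derivation is designed with strictly decreasing sub-structures, and why the recursor for terms must be able to invoke the recursor for types on the sub-types $S$ and $T$ of a function or an application. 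Threading the semantic data carried by the branch premises $G_A$ through the substitutions and checking that the result type equals the target motive instance requires careful applications of the naturality and interaction lemmas but is routine once the branch is identified. Finally, the recurring subtlety of a variable at layer $\varlevel$ or $\codelevel$ being substituted by a term only well-typed at $\metalevel$ --- which happens whenever MLTT code is run at layer $\metalevel$ --- is absorbed by the \textbf{Layering Restriction} lemmas for types, terms, local contexts and local substitutions. Instantiating the fundamental theorems once with the syntactic equivalence and once with the convertibility checking judgments then yields the consequences used in the remainder of the development.
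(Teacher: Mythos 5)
Your overall strategy --- one mutual induction that, for each case, unfolds the semantic judgment by fixing $\tyequiv[L']{\phi}{\phi'}L$, then $\sigma \approx \sigma'$, then $k \ge i$, then $\delta \approx \delta'$, and exhibits membership in the appropriate Kripke relation, with the $k\in\{\varlevel,\codelevel\}$ subcases reduced via the local-substitution lemma and the laws of the generic equivalence supplying the syntactic witnesses --- is the same as the paper's (which is phrased as proving each rule semantically valid one by one and then closing with a trivial induction; the content is identical). Your treatment of the MLTT cases, of $\tletbox$, and of Layering Restriction bridging layers $\proglevel$ and $\metalevel$ all match.

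There is one place where your account does not quite close: the recursor case. You say you will do ``a nested --- and necessarily itself mutual --- induction on the layer-$\codelevel$ typing derivation of $t'$'' and invoke the coverage/progress lemmas of \Cref{sec:dt:syn:cov} to pin down which $\beta$ rule fires. What the IH on the scrutinee actually hands you, however, is not the layer-$\codelevel$ typing derivation of $t'$: unfolding the scrutinee's logical-relation witness for $\CTrm[\Delta]{T}{l}$ in the boxed case yields the bookkeeping derivation $\ldtyping\codelevel\proglevel{t'}{T}{l}$, and the nested mutual induction the paper performs is on this (together with $\ldtypwf\codelevel\proglevel{T'}{l'}$ for the type recursor). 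That distinction is load-bearing. The bookkeeping derivation carries, by construction, both the constructor shape of every sub-structure --- so no separate coverage/progress argument is needed inside the fundamental-theorem proof --- and the stable semantic witnesses $\ldStypingge\proglevel\proglevel{t_1}{T_1}{l_1}$ and $\ldStypwfge\proglevel\proglevel{S}{l_1}$ for each sub-structure, which are exactly what the branch's recursive calls consume. If you instead induct on the plain typing derivation, the coverage lemmas give you only syntactic sub-derivations; to get their semantic relatedness you would have to re-apply the fundamental theorem to them, which is not available at that point without a carefully designed measure --- and $t'$ need not even be a syntactic sub-term of the original scrutinee, since the latter may only \emph{reduce} to $\boxit{t'}$. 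Relatedly, the bookkeeping judgments are not ``produced by a side induction on the layer-$\codelevel$ typing derivation''; they are exactly what $\lstyping\codelevel{t'}{T}{l}$ unfolds to when instantiated at $k = \codelevel$ with identity local substitutions, so they drop directly out of the outer IH.
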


The following lemma is obvious. 
\begin{lemma}[PER]
  All semantic judgments are PER.
\end{lemma}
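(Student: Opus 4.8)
The statement asserts that each semantic judgment --- $\lsgctx\Psi$, $\lsgsubeq\sigma{\sigma'}\Phi$, $\lsctxeq i\Gamma\Delta$, $\lstypeq i T{T'}l$, $\lstyequiv i t{t'}Tl$, and $\lssubeq i\delta{\delta'}\Delta$ --- is symmetric and transitive (a PER is symmetric and transitive, not reflexive, so no new reflexivity argument is required beyond what is already available). The plan is to prove symmetry and transitivity for each family separately, in dependency order: global contexts and global substitutions first, then local contexts, then types, then terms, then local substitutions. In each case I would unfold the definition, introduce the universally quantified data ($\phi$/$\phi'$, $\sigma$/$\sigma'$, the layer $k$, and $\delta$/$\delta'$), and reduce the goal to the corresponding symmetry or transitivity property of the underlying Kripke logical relation --- established in \Cref{sec:dt:logrel-prop} for $j = \proglevel$ and in the properties section for $i = j = \metalevel$ --- or of the related-global-context and related-global-substitution relations from \Cref{sec:dt:semglob}.

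For symmetry, the main subtlety is the deliberate left-bias of the definitions: for instance $\lstyequiv i t{t'}Tl$ reports the type $T[\phi][\sigma][\delta]$, built from the \emph{left} operand's substitutions, whereas after swapping the roles of $(t,\phi,\sigma,\delta)$ and $(t',\phi',\sigma',\delta')$ the logical relation naturally produces a derivation typed at $T[\phi'][\sigma'][\delta']$. So I would first flip the data using symmetry of universe-level equivalence and the Symmetry lemmas for related global substitutions and for related local substitutions; then apply the Symmetry lemma for the logical relations (the $j = \proglevel$ version when $i = \proglevel$, the $i = j = \metalevel$ version when $i = \metalevel$); and finally re-index the reported type back to the left side using the semantic well-formedness precondition carried by the judgment --- concretely, the Irrelevance lemmas together with the type equivalence obtained from the precondition via Escape (for $\lssubeq$ one instead uses semantic well-formedness of the codomain context together with the Irrelevance lemma for related local substitutions). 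The $\lsgsubeq$ and $\lsctxeq$ cases follow the same recipe with fewer re-indexings.

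For transitivity, given two semantic equivalences sharing a middle object, I would instantiate both at the same $\phi/\phi'$, $\sigma/\sigma'$, $k$, $\delta/\delta'$ and feed the two resulting derivations to the Reflexivity-and-Transitivity lemmas for the logical relations. The catch is that the middle object $T'$ appears applied to the left substitutions in one derivation and to the right substitutions in the other, so the witnesses $\D_1$ and $\D_2$ do not a priori match; but Determinacy guarantees their reduced forms coincide, so the Left/Right Irrelevance lemmas realign the witnesses, reflexivity supplies the ``same input on both sides'' instances needed inside the function and meta-function cases, and the transitivity lemma then yields a witness for the required left-biased conclusion. Transitivity of $\lssubeq$ must also be available to close off the $\lstyequiv$ case when compatible $\delta$-data must be composed, which is exactly why the substitution judgments are proved last. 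I expect the main obstacle to be precisely this bookkeeping around irrelevance and left-bias --- essentially the same administrative dance as in \Cref{lem:dt:lsubst-eq}, now lifted to the semantic level --- rather than any genuinely new semantic content, since all the heavy lifting (symmetry, transitivity, irrelevance, and reflexivity of the logical relations at both $j = \proglevel$ and $i = j = \metalevel$, and the analogues for the related-global-context and related-global-substitution relations) has already been carried out in the preceding sections.
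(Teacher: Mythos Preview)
Your proposal is correct and is precisely the kind of argument the paper has in mind: the paper gives no proof at all beyond declaring the lemma ``obvious,'' relying implicitly on the already-established PER, irrelevance, and reflexivity properties of the underlying logical relations and of the related-global-substitution relation. Your elaboration of how to handle the left-bias via irrelevance and how to align middle objects via reflexivity-plus-transitivity is exactly the expected unfolding of that ``obvious.''
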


\begin{lemma}[Reduction Expansion] $ $
  \begin{itemize}
  \item If $\lstypeq i T{T'} l$, $i \in \{\proglevel, \metalevel\}$ and $\ttypred i {T_1}{T'} l$, then $\lstypeq i{T}{T_1} l$.
  \item If $\lstyequiv i t{t'} T l$, $i \in \{\proglevel, \metalevel\}$ and $\ttrmred i{t_1}{t'}T l$, then
    $\lstyequiv i t{t_1} T l$.
  \end{itemize}
\end{lemma}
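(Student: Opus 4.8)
The plan is to unfold the definitions of the two semantic judgments and observe that replacing the right-hand side $T'$ (resp.\ $t'$) by its one-step weak-head expansion $T_1$ (resp.\ $t_1$) leaves every component of the judgment unchanged except the universal property over substitutions, and that that property then follows from the Weak Head Expansion lemma for the Kripke logical relations, applied only on the right.

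Concretely, for the type part, the side conditions $\lsgctx\Psi$ and $\lsctxwf{\typeof i}{\Gamma}$ in the definition of $\lstypeq{i}{T}{T'}{l}$ do not mention $T'$, so they transfer verbatim to $\lstypeq{i}{T}{T_1}{l}$. For the universal property I would fix related universe substitutions $\phi\approx\phi'$, related global substitutions $\sigma\approx\sigma'$ over $\Psi[\phi]$, a layer $k\ge i$, and related local substitutions $\delta\approx\delta'$ over $\Gamma[\phi][\sigma]$; the hypothesis gives the Kripke relation $\ldtypeq{k}{\typeof i}{T[\phi][\sigma][\delta]}{T'[\phi'][\sigma'][\delta']}{l[\phi]}$, and the goal is the same relation with $T_1[\phi'][\sigma'][\delta']$ on the right. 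Since $\ttypred{i}{T_1}{T'}{l}$ bundles $T_1\redd T'$ with the well-formedness of $T_1$, and since $i\in\{\proglevel,\metalevel\}$ gives $\compt i$, I would push the reduction $T_1\redd T'$ through $\phi'$, $\sigma'$ and $\delta'$ using the three reduction substitution lemmas (Universe, Global, Local Substitutions for $\redd$); their well-typedness side conditions on $\phi'$, $\sigma'$, $\delta'$ come from the logical-relation data already in hand via escape, then subsumption, then presupposition, with a Lifting step to move from layer $i$ to $k$. This yields $T_1[\phi'][\sigma'][\delta']\redd T'[\phi'][\sigma'][\delta']$, and the corresponding syntactic substitution lemmas show the expansion is well-formed at the appropriate layer. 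On the left, $T[\phi][\sigma][\delta]$ reduces to itself in zero steps and is well-formed by escape and presupposition on the relation already assumed. Applying Weak Head Expansion (the $j=\proglevel$ version when $i=\proglevel$, the $i=j=\metalevel$ version when $i=\metalevel$) to that relation with these two multi-step reductions produces exactly the desired relation between $T[\phi][\sigma][\delta]$ and $T_1[\phi'][\sigma'][\delta']$.

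The term part is structurally identical: the non-relational components $\lsgctx\Psi$, $\lsctxwf{\typeof i}{\Gamma}$ and $\lstypwf{\typeof i}{T}{l}$ of $\lstyequiv{i}{t}{t'}{T}{l}$ carry over; from the definition one extracts the witnessing type derivation $\mathcal{D}$ relating $T[\phi][\sigma][\delta]$ to some $T''$; one transports $t_1\redd t'$ through $\phi'$, $\sigma'$, $\delta'$ by the reduction substitution lemmas; and one closes with the term version of Weak Head Expansion against $\mathcal{D}$ on the right. The one extra wrinkle is to reconcile the type recorded in the typed multi-step reduction judgment for $t_1[\phi'][\sigma'][\delta']$ with the type against which $t[\phi][\sigma][\delta]$ and $t'[\phi'][\sigma'][\delta']$ are related; this is settled by Preservation (the reduct $t'$ has the same type $T$ as $t_1$) together with the right/left irrelevance lemmas.

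I expect the main obstacle to be purely bureaucratic rather than conceptual: correctly assembling the well-typedness premises required by the reduction substitution lemmas out of the layered logical-relation data (escape $\to$ subsumption $\to$ presupposition, plus Lifting from $i$ to $k$), and, in the term case, the type-matching bookkeeping just described. The genuine mathematical content is entirely contained in the one-sided use of Weak Head Expansion.
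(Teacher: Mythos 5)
Your proposal is correct and follows essentially the same approach as the paper, whose proof is the one-line remark that the lemma follows from Weak Head Expansion for the Kripke logical relations together with the stability of reduction under all three kinds of substitutions; you have simply spelled out the bookkeeping (non-relational components unchanged, one-sided use of Weak Head Expansion with a zero-step reduction on the left, and $\escape\to\textrm{subsumption}\to\textrm{presupposition}$ plus Lifting to assemble the well-typedness side conditions).
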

\begin{proof}
  It is easy to easy due to the weak head expansion lemma for the logical relations
  and the stability of reduction under all substitutions. 
\end{proof}

The escape lemma recovers Kripke logical relations by passing in corresponding
identity substitutions:
\begin{lemma}[Escape] $ $
  \begin{itemize}
  \item If $\lsgctx \Psi$, then $\ldSgctxeq{\Psi}{\Psi}$. 
  \item If $\lsgsubeq \sigma{\sigma'} \Phi$, then
    $\ldSgsubeq{\sigma}{\sigma'}{\Phi}$. 
  \item If $\lsctxeq i \Gamma\Delta$, $i \in \{\proglevel, \metalevel\}$ and $k \ge i$, then
    $\ldSctxeq k{\typeof i} {\Gamma}{\Delta}$.
  \item If $\lstypeq i T{T'} l$ and $k \ge i$, then $\ldtypeq k{\typeof i}{T}{T'}{l}$.
  \item If $\lstyequiv i t{t'} T l$ and $k \ge i$, then
    $\ldtyequivt k{\typeof i}{t}{t'}{T}{l}$.
  \item If $\lssubeq i \delta{\delta'}{\Delta}$ and $k \ge i$, then
    $\ldsubeq k{\typeof i} \delta{\delta'}{\Delta}$.
  \end{itemize}
\end{lemma}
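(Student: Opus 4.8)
The plan is to prove all six statements together by a single cascade of instantiations: each semantic judgment is, by definition, a universally quantified statement over universe substitutions $\phi,\phi'$, global substitutions $\sigma,\sigma'$ and (for the type, term and local‑substitution cases) local substitutions $\delta,\delta'$, and the escape lemma is simply obtained by feeding it the respective \emph{identity} substitutions. Concretely, for $\lsgctx \Psi$ I instantiate $\phi = \phi' = \id_L$; since $\Psi[\id_L] = \Psi$ by the algebra of universe substitutions (Corollary on acting on identities), this immediately yields $\ldSgctxeq\Psi\Psi$. This first statement is proved outright, with no appeal to the others.

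The remaining five statements bootstrap off the ones already proved. For $\lsgsubeq \sigma{\sigma'}\Phi$ I instantiate $\phi = \phi' = \id_L$ and take $\sigma_1 = \sigma_1' = \id_\Psi$; the hypothesis $\ldSgsubeq[\Psi]{\id_\Psi}{\id_\Psi}{\Psi[\id_L]}$ needed to fire the instantiation is supplied by the first statement together with the Reflexive Global Identity corollary (whose side condition $\judge[L]\Psi$ follows from the Escape lemma for $\ldSgctx$). Using the algebra of global substitutions — $\sigma[\id] = \sigma$, $\Phi[\id] = \Phi$, and right identity $\sigma \circ \id = \sigma$ — the conclusion collapses to $\ldSgsubeq\sigma{\sigma'}\Phi$. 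The context case $\lsctxeq i \Gamma\Delta$ is handled the same way: instantiate $\phi = \phi' = \id_L$ and $\sigma = \sigma' = \id_\Psi$ (reflexively related by the first statement plus Reflexive Global Identity), then simplify $\Gamma[\id][\id] = \Gamma$ and likewise for $\Delta$.

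For $\lstypeq i T{T'} l$, $\lstyequiv i t{t'} T l$ and $\lssubeq i \delta{\delta'}\Delta$ I additionally need a reflexively related local identity substitution. This is extracted from the context component bundled into each of these semantic judgments, namely $\lsctxwf{\typeof i}\Gamma$: applying the context case just established (legitimate since $\typeof i \in \{\proglevel,\metalevel\}$) produces the Kripke context relation, from which \Cref{lem:dt:refl-lid} (Reflexivity of Local Identity Substitutions), together with its $j = \metalevel$ counterpart, yields the reflexively related $\id_\Gamma$. Plugging $\phi = \phi' = \id_L$, $\sigma = \sigma' = \id_\Psi$ and $\delta = \delta' = \id_\Gamma$ into the definitions and simplifying all triple‑identity applications via the substitution‑algebra lemmas — and, in the $\lssubeq$ case, right identity of local‑substitution composition $\delta \circ \id = \delta$ — gives exactly the stated Kripke relations.

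The only genuinely delicate point is bookkeeping rather than depth: one must respect the cascade order (statement 1 before 2 and 3; statement 3 before 4, 5, 6) so that the reflexive‑identity hypotheses are always available from an already‑established escape instance, and one must check that the layer indices line up — that the quantifier $k \ge i$ in each semantic judgment, paired with $\typeof i$ as the second logical‑relation index, lands in the regime where the reflexive‑identity lemmas actually apply (using, where $i \in \{\varlevel,\codelevel\}$, the identification of $\ldtypeq\codelevel\proglevel$ etc.\ with the inductive tracking judgments of \Cref{sec:dt:semglob}). Once the identity substitutions are in place every remaining step is a rewrite by $[\id]$ being a no‑op or by a right‑identity law, so I do not expect any real obstruction.
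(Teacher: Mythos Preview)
Your proposal is correct and matches the paper's approach exactly: the paper's entire proof is the one-line remark that the escape lemma ``recovers Kripke logical relations by passing in corresponding identity substitutions,'' and you have spelled out precisely this instantiation cascade together with the bookkeeping on where the reflexive-identity witnesses come from.
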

By chaining escape lemmas, we can obtain that semantic equivalence judgments generic
and syntactic equivalences.

The semantic judgments are stable under substitutions.
\begin{lemma}[Universe Substitutions] $ $
  \begin{itemize}
  \item If $\lsgctx \Psi$ and $\typing[L']{\phi}L$, then $\lsgctx[L']{\Psi[\phi]}$.
  \item If $\lsgsubeq \sigma{\sigma'} \Phi$ and $\tyequiv[L']{\phi}{\phi'}L$, then
    $\lsgsubeq[\Psi[\phi]][L']{\sigma[\phi]}{\sigma'[\phi']}{\Phi[\phi]}$.
  \item If $\lsctxeq i \Gamma\Delta$, $i \in \{\proglevel, \metalevel\}$ and $\tyequiv[L']{\phi}{\phi'}L$, then
    $\lsctxeq[\Psi[\phi]][L'] i{\Gamma[\phi]}{\Delta[\phi']}$.
  \item If $\lstypeq i T{T'} l$ and $\tyequiv[L']{\phi}{\phi'}L$, then
    $\lstypeq[\Psi[\phi]][\Gamma[\phi]][L'] i{T[\phi]}{T'[\phi']}{l[\phi]}$.
  \item If $\lstyequiv i t{t'} T l$ and $\tyequiv[L']{\phi}{\phi'}L$,
    then
    $\lstyequiv[\Psi[\phi]][\Gamma[\phi]][L'] i{t[\phi]}{t'[\phi']}{T[\phi]}{l[\phi]}$.
  \item If $\lssubeq i \delta{\delta'}{\Delta}$ and $\tyequiv[L']{\phi}{\phi'}L$, then
    $\lssubeq[\Psi[\phi]][\Gamma[\phi]][L'] i{\delta[\phi]}{\delta'[\phi']}{\Delta[\phi]}$.
  \end{itemize}  
\end{lemma}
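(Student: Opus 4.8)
The plan is to exploit the fact that every semantic judgment is, by construction, a universally quantified statement of the form ``for all universe substitutions $\tyequiv[L']{\phi}{\phi'}L$, and, layered above them, all related global substitutions and all related local substitutions, the corresponding Kripke logical relation holds of the $\phi$-substituted data''. Hence stability under one more universe substitution is a reassociation, not a fresh induction on derivations. Take the representative case of $\lstyequiv i t{t'}T l$ together with $\tyequiv[L']{\phi_0}{\phi_0'}L$: to prove $\lstyequiv[\Psi[\phi_0]][\Gamma[\phi_0]][L'] i{t[\phi_0]}{t'[\phi_0']}{T[\phi_0]}{l[\phi_0]}$ I would first unfold the goal, which asks for the side conditions $\lsgctx[L']{\Psi[\phi_0]}$ and the semantic well-formedness of $\Gamma[\phi_0]$ and $T[\phi_0]$ (obtained from the earlier items applied reflexively), and then, for an arbitrary $\tyequiv[L'']{\psi}{\psi'}{L'}$ and arbitrary related $\sigma,\sigma',\delta,\delta'$, a membership of the logical relation of types at universe level $l[\phi_0][\psi]$. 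I would then form the composite universe substitutions $\phi_0 \circ \psi$ and $\phi_0' \circ \psi'$, which are well-typed $L'' \to L$ by the composition lemma for universe substitutions and still related, feed them (with $\sigma,\sigma',\delta,\delta'$) into the hypothesis, and finish by rewriting every chain $X[\phi_0][\psi]$ into the single composite $X[\phi_0 \circ \psi]$ with the Algebra of Universe Substitutions lemma (stated for levels, types, local contexts, terms, and global contexts, and extended componentwise to global substitutions), invoking \Cref{lem:dt:wkact,lem:dt:interact} to commute the universe substitution past the global and local substitutions. The equivalence side conditions on universe levels are preserved by the substitution lemmas of \Cref{sec:dt:ulevel}.

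I would discharge the six items in the order they are stated, since each one's definition refers to the earlier ones. The global-context case is a bare unfolding; the global-substitution case additionally needs the reflexive well-formedness of $\Psi[\phi_0]$ and $\Phi[\phi_0]$ from the first item, together with the reassociated form of its defining condition (whose ``$\sigma[\phi]\circ\sigma_1$'' shape already matches what the hypothesis produces); the local-context case needs the first item for its side condition and then the reassociated body; and the type, term, and local-substitution cases need the earlier items for their well-formedness premises and then the same reassociation argument for the body. In no case is there a recursion over typing or over logical-relation derivations: the argument is uniform, consisting of definitional unfolding plus the substitution-algebra rewrites.

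The main obstacle is bookkeeping rather than mathematics. \delamlang maintains three interacting substitution mechanisms at once, and I must keep the universe composite $\phi_0 \circ \psi$ correctly interleaved with the global substitution $\sigma$ and the local substitution $\delta$ at each relevant layer $k$, relying on \Cref{lem:dt:wkact,lem:dt:interact} and the algebra lemmas to normalise every substituted term to a single composite so that it coincides with what the hypothesis supplies. A secondary subtlety is that the defining quantifier of the semantic well-formedness of a global context ranges over \emph{related} pairs $\phi \approx \phi'$, whereas the first item assumes only a single well-formed $\phi_0$; to conclude it after substitution I will need $\phi_0 \circ \psi \approx \phi_0 \circ \psi'$ whenever $\psi \approx \psi'$, which is a short induction on $\phi_0$ using the universe-level substitution-equivalence lemma applied pointwise. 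No logical-relation weakening law is invoked, since a universe substitution is not a weakening, and the entire argument reduces to definitional unfolding together with the algebraic laws for substitutions.
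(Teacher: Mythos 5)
Your proposal is correct and takes essentially the same approach as the paper, whose entire proof reads ``Use composition of universe substitutions'': since every semantic judgment is universally quantified over (pairs of) universe substitutions, stability under a further one is obtained by precomposing and then collapsing $X[\phi_0][\psi]$ to $X[\phi_0\circ\psi]$ via the Algebra of Universe Substitutions lemma. One minor remark: your appeal to \Cref{lem:dt:wkact,lem:dt:interact} to commute the universe substitution past $\sigma$ and $\delta$ is unnecessary here, because in the unfolded goal $\psi$ already sits immediately to the right of $\phi_0$ (to the left of all global and local substitutions), so only the composition law is actually invoked; your identification of the ``secondary subtlety'' about $\phi_0\circ\psi\approx\phi_0\circ\psi'$ is the right thing to flag.
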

\begin{proof}
  Use composition of universe substitutions.
\end{proof}

\begin{lemma}[Global Substitutions] $ $
  \begin{itemize}
  \item If $\lsgsubeq \sigma{\sigma'} \Phi$ and $\lsgsubeq[\Psi']{\sigma_1}{\sigma_1'}{\Psi}$, then
    $\lsgsubeq[\Psi']{\sigma \circ \sigma_1}{\sigma' \circ \sigma_1'}{\Phi}$.
  \item If $\lsctxeq i \Gamma\Delta$, $i \in \{\proglevel, \metalevel\}$ and $\lsgsubeq[\Phi] \sigma{\sigma'} \Psi$, then
    $\lsctxeq[\Phi] i{\Gamma[\sigma]}{\Delta[\sigma']}$.
  \item If $\lstypeq i T{T'} l$ and $\lsgsubeq[\Phi] \sigma{\sigma'} \Psi$, then
    $\lstypeq[\Phi][\Gamma[\sigma]] i{T[\sigma]}{T'[\sigma']}{l}$.
  \item If $\lstyequiv i t{t'} T l$ and $\lsgsubeq[\Phi] \sigma{\sigma'} \Psi$,
    then
    $\lstyequiv[\Phi][\Gamma[\sigma]] i{t[\sigma]}{t'[\sigma']}{T[\sigma]}{l}$.
  \item If $\lssubeq i \delta{\delta'}{\Delta}$ and $\lsgsubeq[\Phi] \sigma{\sigma'} \Psi$, then
    $\lssubeq[\Phi][\Gamma[\sigma]] i{\delta[\sigma]}{\delta'[\sigma']}{\Delta[\sigma]}$.
  \end{itemize}  
\end{lemma}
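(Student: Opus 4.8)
The plan is to exploit that every semantic judgment in the statement is a universal closure over (i) a universe substitution $\phi$, (ii) an $\ldSgsubeq$-related global substitution, (iii) a layer $k$, and (iv) an $\ldsubeq$-related local substitution, with the conclusion landing in the Kripke logical relation, and that the very definition of $\lsgsubeq$ already contains the clause ``precomposing an $\ldSgsubeq$-substitution with an $\lsgsubeq$-substitution again yields an $\ldSgsubeq$-substitution''. So for each item I would unfold the conclusion, take the quantified data, move the external $\sigma$ (resp.\ $\sigma'$) into the quantified global substitution using that definitional clause, rewrite with the algebra of substitutions, and invoke the hypothesis. No fresh induction is needed: the only ingredients are the algebra of universe/global/local substitutions (\Cref{lem:dt:interact}, the algebra-of-global-substitutions lemma and its corollaries on identities and weakenings) and the already-established semantic-level lemmas (escape, PER). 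I would dispatch item~1 first, since it is the only one not of the form ``an external $\sigma$ acting on a semantic judgment'', and then items~2--5 by one uniform template.

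For item~1, the goal $\lsgsubeq[\Psi']{\sigma\circ\sigma_1}{\sigma'\circ\sigma_1'}\Phi$ has its side conditions $\lsgctx{\Psi'}$ and $\lsgctx\Phi$ read off from the definitions of the two hypotheses. Unfolding, one is handed $\tyequiv[L']\phi{\phi'}L$ and an $\ldSgsubeq$-relatedness of $\sigma_2,\sigma_2'$ over $\Psi'[\phi]$, and must relate $(\sigma\circ\sigma_1)[\phi]\circ\sigma_2$ to $(\sigma'\circ\sigma_1')[\phi']\circ\sigma_2'$ over $\Phi[\phi]$. Using the interaction equality $(\sigma\circ\sigma_1)[\phi]=\sigma[\phi]\circ\sigma_1[\phi]$ (a one-line consequence of \Cref{lem:dt:interact}) together with associativity of global composition, this is $\sigma[\phi]\circ(\sigma_1[\phi]\circ\sigma_2)$; applying the definitional clause of $\lsgsubeq[\Psi']{\sigma_1}{\sigma_1'}\Psi$ at $(\phi,\phi',\sigma_2,\sigma_2')$ gives the relatedness of $\sigma_1[\phi]\circ\sigma_2$ over $\Psi[\phi]$, and then the clause of $\lsgsubeq\sigma{\sigma'}\Phi$ applied to that gives the relatedness over $\Phi[\phi]$.

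Items~2--5 all follow the same pattern, which I illustrate on item~4. Unfold $\lstyequiv[\Phi][\Gamma[\sigma]]i{t[\sigma]}{t'[\sigma']}{T[\sigma]}l$: one is given $\phi,\phi'$, an $\ldSgsubeq$-relatedness of $\sigma_2,\sigma_2'$ over $\Phi[\phi]$, a layer $k\ge i$, and an $\ldsubeq$-relatedness of $\delta,\delta'$ over $\Gamma[\sigma][\phi][\sigma_2]$. First apply the definitional clause of $\lsgsubeq[\Phi]\sigma{\sigma'}\Psi$ to obtain the $\ldSgsubeq$-relatedness of $\sigma[\phi]\circ\sigma_2$ and $\sigma'[\phi']\circ\sigma_2'$ over $\Psi[\phi]$; then rewrite $\Gamma[\sigma][\phi][\sigma_2]=\Gamma[\phi][\sigma[\phi]\circ\sigma_2]$ (by \Cref{lem:dt:interact} and composition of global substitutions), so $\delta,\delta'$ are related over $\Gamma[\phi][\sigma[\phi]\circ\sigma_2]$; then feed $(\phi,\phi')$, $\sigma[\phi]\circ\sigma_2$, $k$, and $\delta,\delta'$ into the hypothesis $\lstyequiv i t{t'}Tl$, obtaining the logical-relation relatedness of $t[\phi][\sigma[\phi]\circ\sigma_2][\delta]$ and $t'[\phi'][\sigma'[\phi']\circ\sigma_2'][\delta']$ at layer $k$; finally rewrite these back to $t[\sigma][\phi][\sigma_2][\delta]$ and $t'[\sigma'][\phi'][\sigma_2'][\delta']$, the universe level staying $l[\phi]$ since global substitution leaves universe levels alone. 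Items~2, 3 and 5 are verbatim the same with $\ldctxeq$, $\ldtypeq$, $\ldsubeq$ in place of the term relation. The auxiliary premises appearing in the unfolded goals — e.g.\ $\lsctxwf{\typeof i}{\Gamma[\sigma]}$ and $\lstypwf{\typeof i}{T[\sigma]}$ for item~4 — come from the corresponding premises $\lsctxwf{\typeof i}\Gamma$, $\lstypwf{\typeof i}Tl$ of the hypothesis via items~2 and~3 plus reflexivity of the semantic judgments (they are PERs), while $\lsgctx\Phi$ is part of the definition of $\lsgsubeq[\Phi]\sigma{\sigma'}\Psi$.

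I expect the only real work to be the substitution bookkeeping: lifting \Cref{lem:dt:interact} to compositions $\sigma\circ\sigma_1$ of global substitutions and to one global substitution acting on another (all consequences of the algebra-of-global-substitutions lemma), and keeping each composed substitution well-typed wherever a cited lemma requires it (discharged by the escape lemmas and presupposition). The single delicate point is item~1 in the presence of a nontrivial universe substitution $\phi$: since $\Phi$, $\Psi$ and $\Psi'$ may all mention the variables of $L$, the commutation of $[\phi]$ past $\circ$ and the subsequent re-association must be done in exactly the right order, so I would extract the equality $(\sigma\circ\sigma_1)[\phi]=\sigma[\phi]\circ\sigma_1[\phi]$ as a small separate algebraic lemma to keep that step clean. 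Beyond that the lemma is a direct calculation with no conceptual content.
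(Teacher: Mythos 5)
Your proposal is correct and follows the paper's own route: unfold the semantic judgment, absorb the external $\sigma$ into the quantified global substitution via the definitional precomposition clause of $\lsgsubeq$, and then use the interaction equality $t[\sigma][\phi]=t[\phi][\sigma[\phi]]$ together with composition of global substitutions to move $\sigma$ forward so the hypothesis applies directly. The paper's own proof says no more than this — ``use the composition of global substitutions'' and ``make use of the commutativity $t[\phi][\sigma[\phi]]=t[\sigma][\phi]$ to swap $\sigma$ forwards'' — so your writeup is a faithful (if more detailed) elaboration of the same argument.
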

\begin{proof}
  The principle is also to use the composition of global substitutions. %
  We also make use of the commutativity of substitutions, e.g.
  \[
    t[\phi][\sigma[\phi]] = t[\sigma][\phi]
  \]
  This equation allows us to swap $\sigma$ forwards, which will be frequently used in
  this proof.
\end{proof}

\begin{lemma}[Local Substitutions] $ $
  \begin{itemize}
  \item If $\lstypeq i T{T'} l$ and $\lssubeq[\Psi][\Delta] i \delta{\delta'}{\Gamma}$, then
    $\lstypeq[\Psi][\Delta] i{T[\delta]}{T'[\delta']}{l}$.
  \item If $\lstyequiv i t{t'} T l$ and
    $\lssubeq[\Psi][\Delta] i \delta{\delta'}{\Gamma}$, then
    $\lstyequiv[\Psi][\Delta] i{t[\delta]}{t'[\delta']}{T[\delta]}{l}$.
  \item If $\lssubeq i \delta{\delta'}{\Delta}$ and $\lssubeq[\Psi][\Gamma'] i{\delta_1}{\delta_1'}{\Gamma}$, then
    $\lssubeq[\Psi][\Gamma'] i{\delta \circ \delta_1}{\delta' \circ \delta_1'}{\Delta}$.
  \end{itemize}  
\end{lemma}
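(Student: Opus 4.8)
The plan is to prove all three bullets directly, without induction, by unfolding both the hypotheses and the conclusions down to the level of the Kripke logical relations and then closing the remaining gap purely with the algebraic laws for substitutions. First I would observe that every well-formedness side condition demanded by a conclusion is already a component of one of the hypotheses: $\lsgctx\Psi$ is part of each hypothesis, $\lsctxwf{\typeof i}\Delta$ is part of $\lssubeq[\Psi][\Delta] i\delta{\delta'}\Gamma$ (and of $\lssubeq i\delta{\delta'}\Delta$), and $\lsctxwf{\typeof i}\Gamma'$ is part of $\lssubeq[\Psi][\Gamma'] i{\delta_1}{\delta_1'}\Gamma$. Hence only the quantified implication inside each definition requires genuine argument.

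For the first two bullets, to prove $\lstypeq[\Psi][\Delta] i{T[\delta]}{T'[\delta']}l$ (and analogously the term version $\lstyequiv[\Psi][\Delta] i{t[\delta]}{t'[\delta']}{T[\delta]}l$), I would fix the data of the definition: $\tyequiv[L']{\phi}{\phi'}L$, $\ldSgsubeq[\Phi][L']\sigma{\sigma'}{\Psi[\phi]}$, $k\ge i$, and $\ldsubeq[\Phi][\Delta''][L'] k{\typeof i}{\delta_1}{\delta_1'}{\Delta[\phi][\sigma]}$ over a fresh local context $\Delta''$. Two applications of \Cref{lem:dt:interact} (commuting the local substitution $\delta$ past $\phi$ and then past $\sigma$), followed by the composition clause of the algebra of local substitutions, rewrite $T[\delta][\phi][\sigma][\delta_1]$ as $T[\phi][\sigma][(\delta[\phi][\sigma])\circ\delta_1]$, and symmetrically on the primed side; for the term bullet the type index is transformed in exactly the same way. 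Now feeding $\phi,\phi',\sigma,\sigma',k$ and $\delta_1,\delta_1'$ into the hypothesis $\lssubeq[\Psi][\Delta] i\delta{\delta'}\Gamma$ — whose quantified premise is precisely $\ldsubeq[\Phi][\Delta''][L'] k{\typeof i}{\delta_1}{\delta_1'}{\Delta[\phi][\sigma]}$ — yields $\ldsubeq[\Phi][\Delta''][L'] k{\typeof i}{(\delta[\phi][\sigma])\circ\delta_1}{(\delta'[\phi'][\sigma'])\circ\delta_1'}{\Gamma[\phi][\sigma]}$. Supplying this, together with $\phi,\phi',\sigma,\sigma',k$, to the hypothesis $\lstypeq i T{T'}l$ (resp.\ $\lstyequiv i t{t'}Tl$) produces exactly the rewritten goal.

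For the third bullet, to prove $\lssubeq[\Psi][\Gamma'] i{\delta\circ\delta_1}{\delta'\circ\delta_1'}\Delta$, I would fix $\phi,\phi',\sigma,\sigma',k$ and $\ldsubeq[\Phi][\Gamma''][L'] k{\typeof i}{\delta_2}{\delta_2'}{\Gamma'[\phi][\sigma]}$. By the distributivity clauses of \Cref{lem:dt:interact} for $\circ$ under universe and global substitutions, $(\delta\circ\delta_1)[\phi][\sigma]=(\delta[\phi][\sigma])\circ(\delta_1[\phi][\sigma])$, so by associativity of composition the target substitution equals $(\delta[\phi][\sigma])\circ((\delta_1[\phi][\sigma])\circ\delta_2)$, and similarly on the primed side. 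Applying the hypothesis $\lssubeq[\Psi][\Gamma'] i{\delta_1}{\delta_1'}\Gamma$ to $\phi,\phi',\sigma,\sigma',k,\delta_2,\delta_2'$ gives $\ldsubeq[\Phi][\Gamma''][L'] k{\typeof i}{(\delta_1[\phi][\sigma])\circ\delta_2}{(\delta_1'[\phi'][\sigma'])\circ\delta_2'}{\Gamma[\phi][\sigma]}$; applying $\lssubeq i\delta{\delta'}\Delta$ to this, again with $\phi,\phi',\sigma,\sigma',k$, closes the goal over $\Delta[\phi][\sigma]$.

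The mathematical content is shallow — the lemma is essentially a naturality statement — so the main obstacle is bookkeeping. Each commutation of a universe, global, or local substitution past another is licensed only for well-typed subjects, so the syntactic presuppositions must be threaded through the argument (via \Cref{lem:dt:presup1} together with the escape lemmas taking semantic judgments to syntactic typing). Moreover, in the degenerate layers $k\in\{\varlevel,\codelevel\}$ — where, as noted in \Cref{sec:dt:semglob}, the relevant logical relations are the inductive judgments that record layer-$\codelevel$ syntactic structure and discard local substitutions entirely — the substitution rewrites above are replaced by appeals to \Cref{lem:dt:lsubst-cp} and \Cref{lem:dt:lsubst-ge}. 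Once those case splits are in place, the proof proceeds mechanically.
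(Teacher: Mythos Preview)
Your proposal is correct and follows essentially the same approach as the paper: the paper's proof simply says to apply ``the similar technique'' (referring to the preceding global-substitution lemma) and records the two commutation equations $t[\phi][\delta[\phi]] = t[\delta][\phi]$ and $t[\sigma][\delta[\sigma]] = t[\delta][\sigma]$ that ``swap $\delta$ forwards,'' which is exactly your rewriting-then-instantiating argument. One minor over-complication: your final paragraph's case split on $k\in\{\varlevel,\codelevel\}$ is unnecessary here, since the semantic judgments already quantify uniformly over $k\ge i$ and you simply pass the same $k$ through both hypotheses---the syntactic commutation equations hold regardless of layer, so no separate appeal to \Cref{lem:dt:lsubst-cp} is needed for this particular lemma.
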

\begin{proof}
  We apply the similar technique here. %
  We make use equations similar to below:
  \begin{gather*}
    t[\phi][\delta[\phi]] = t[\delta][\phi] \\
    t[\sigma][\delta[\sigma]] = t[\delta][\sigma]
  \end{gather*}
  These equations will swap $\delta$ forwards. 
\end{proof}

The theorem proceeds by doing induction on the derivations. %
The proof though is rather verbose due to how the semantic judgments are defined. %
One pattern that is worth mentioning is that the proof should work ``backwards'' from
the layers. %
That is, we should work out the proofs from layer $\metalevel$, and then $\proglevel$ and then $\codelevel$ and
finally $\varlevel$. %
This pattern makes sense if we consider what information layers contain. %
The semantics of a term at layer $\metalevel$ only contains its computational contents at layer
$\metalevel$. %
However, for a term at layer $\proglevel$, due to lifting, its semantics must explain how this
term computes at both layers $\proglevel$ and $\metalevel$. %
For a term at layer $\codelevel$, in addition to its collective information as a term at layer
$\proglevel$, it should also has all information about its sub-structures. %
At last, if a term is at layer $\varlevel$, then we know it must be well-formed at layer $\codelevel$
but also it represents a variable. %
Thus, due to lifting, information contained at each layer strictly increases as the
layer decreases.  To build up information at a smaller layer, we should prove the
fundamental theorems from a higher layer. %
In the next section, we will start proving the fundamental theorems and make sure that
all syntactically well-formed types and terms at all layers are semantically
well-formed.

\subsection{Proving Fundamental Theorems}

To demonstrate the idea described at the end of the previous subsection, let us first
consider the simplest case. %
We often proceed by first proving the semantic rule for types and then go on and prove
the rules for terms. %
\begin{lemma}
  \begin{mathpar}
    \inferrule
    {\lsctxwf {\typeof i} \Gamma}
    {\lstypeq i\Nat \Nat \ze}
  \end{mathpar}
\end{lemma}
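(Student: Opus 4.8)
The plan is to unfold the definition of the semantic type-equivalence $\lstypeq i\Nat\Nat\ze$ and to exploit that $\Nat$ is a closed type constant, so that all of the substitution machinery in the definition degenerates. Recall that $\lstypeq i T{T'}l$ asks for three things: (i)~$\lsgctx\Psi$; (ii)~$\lsctxwf{\typeof i}\Gamma$; and (iii)~that for every related universe-substitution pair $\phi\approx\phi'$ over $L$, every related global-substitution pair $\sigma\approx\sigma'$ over $\Psi[\phi]$, every quantified layer $k$, and every related local-substitution pair $\delta\approx\delta'$ into $\Gamma[\phi][\sigma]$, one has $\ldtypeq k{\typeof i}{T[\phi][\sigma][\delta]}{T'[\phi'][\sigma'][\delta']}{l[\phi]}$ in the corresponding weakened contexts. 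Requirement~(ii) is exactly the hypothesis of the lemma, and (i) follows from it, since by definition $\lsctxwf{\typeof i}\Gamma$ carries $\lsgctx\Psi$ with it. So the only work is (iii).

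For (iii), I would first note that $\Nat$ and the universe level $\ze$ are fixed by every universe, global, and local substitution, so $\Nat[\phi][\sigma][\delta]=\Nat=\Nat[\phi'][\sigma'][\delta']$ and $\ze[\phi]=\ze$; hence the goal collapses to $\ldtypeq k{\typeof i}{\Nat}{\Nat}{\ze}$ in the ambient weakened contexts $\Phi;\Delta$ over $L'$. By the natural-number clause of the Kripke logical relation this holds as soon as we provide a derivation of $\ttypreds k{\Nat}{\Nat}{\ze}$, which unfolds to the well-formedness $\lttypwf k{\Nat}{\ze}$ together with the reduction $\Nat\reds\Nat$ (the reflexive, zero-step instance of $\reds$).

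The only remaining point is $\lttypwf k{\Nat}{\ze}$, which by the well-formedness rule for $\Nat$ requires just that the ambient local context $\Delta$ be well-formed at layer $\typeof k$; since $k$ is a layer at which the Kripke relation is defined, $k\in\{\proglevel,\metalevel\}$, so $\typeof k=k$ and there is no layer mismatch. That well-formedness I would extract from the given related local substitution into $\Gamma[\phi][\sigma]$: the Escape lemma for related local substitutions turns it into a generic local-substitution equivalence with domain $\Delta$, Subsumption passes to the syntactic local-substitution equivalence, and Presupposition then yields well-formedness of $\Delta$ at the relevant layer. Assembling $\Nat\reds\Nat$, $\lttypwf k{\Nat}{\ze}$, and the $\Nat$ clause closes (iii), and with (i)–(ii) this completes the lemma.

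There is no genuine mathematical obstacle here; the work is purely bureaucratic. The one thing that actually needs attention is the bookkeeping of the three context stacks and the layer indices — in particular verifying that the quantified $k$ is a computable layer so that $\typeof$ acts as the identity on it, and that $\Nat$ together with the level $\ze$ are genuinely invariant under all three forms of substitution. This case also fixes the template for the other base-type semantic rules (for instance the analogous statement for $\Ty l$) and for the general skeleton of the subsequent semantic-rule proofs, which all proceed by unfolding the semantic judgment, pushing the quantified substitutions through, and invoking the matching clause of the Kripke logical relation together with its weak-head-closure and reflexivity lemmas.
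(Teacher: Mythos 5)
There is a genuine gap, and it is precisely the part of the proof that the paper's own argument spends most of its effort on.

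Unfolding the semantic judgment $\lstypeq i T{T'}l$, the conclusion quantifies over all $k \ge i$ and asks for $\ldtypeq[\Phi][\Delta][L']k{\typeof i}{\Nat}{\Nat}{0}$. Since the well-formedness rule for $\Nat$ applies at every $i \ge \codelevel$, the lemma's $i$ can be $\codelevel$, in which case $k$ ranges over $\{\codelevel,\proglevel,\metalevel\}$. You assert ``since $k$ is a layer at which the Kripke relation is defined, $k\in\{\proglevel,\metalevel\}$,'' but that is false: when $k=\codelevel$ the expression $\ldtypeq \codelevel \proglevel \Nat\Nat 0$ is \emph{also} defined, just by a different clause — the auxiliary inductive judgment introduced in \Cref{sec:dt:semglob}, namely $\ldtypeq \codelevel \proglevel T{T'}l := \ldtypwf \codelevel \proglevel T l$ together with $T = T'$. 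That judgment is not established by ``provide $\ttypreds k\Nat\Nat 0$ and apply the $\Nat$ clause''; its rule for $\Nat$ has the nontrivial premise $\ldStypwfge \proglevel \proglevel \Nat l$, i.e.\ stability of the $j=\proglevel$ Kripke relation under all weakenings and local substitutions at layers $\ge\proglevel$.

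So your plan correctly handles $k\in\{\proglevel,\metalevel\}$ (and there the $\Nat$ clause does close the goal immediately by definition, since $\Nat$ is a closed constant and reduces to itself), but it silently skips the $k=\codelevel$ obligation. The paper's proof treats exactly that case separately: by the local substitution lemma one discharges the quantified $\delta,\delta'$ and reduces the goal to $\ldStypwfge[\Phi][\Gamma[\phi][\sigma]][L'] \proglevel \proglevel \Nat 0$, which is then obtained by re-running the $k\in\{\proglevel,\metalevel\}$ argument after converting weakenings to substitutions. Without that step the proposal does not establish the lemma. (The same omission would recur in the subsequent base-type rules and would bite harder for non-closed type constructors, where $\ldtypwf \codelevel \proglevel$ carries inductive premises about sub-structures.)
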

\begin{proof}
  From $\lsctxwf {\typeof i} \Gamma$, we also know $\lsgctx \Psi$. %
  Now assume $\tyequiv[L']{\phi}{\phi'}L$ and
  $\ldSgsubeq[\Phi][L'] \sigma{\sigma'}{\Psi[\phi]}$, we have to consider all
  $i \ge \codelevel$. %
  \begin{itemize}
  \item[Case $i = \metalevel$] Then assuming
    $\ldsubeq[\Phi][\Delta][L'] \metalevel \metalevel \delta{\delta'}{\Gamma[\phi][\sigma]}$, we must
    show
    \[
      \ldtypeq[\Phi][\Delta][L'] \metalevel \metalevel \Nat \Nat \ze
    \]
    This holds by definition.
    
  \item[Case $i = \proglevel$] Then assuming some $k \ge \proglevel$ and
    $\ldsubeq[\Phi][\Delta][L'] k \proglevel \delta{\delta'}{\Gamma[\phi][\sigma]}$, we must
    show
    \[
      \ldtypeq[\Phi][\Delta][L'] k \proglevel \Nat \Nat \ze
    \]
    Again, this also holds by definition as we see that $k \in \{\proglevel, \metalevel\}$.
    
  \item[Case $i = \codelevel$] This is the last case. %
    Assuming some $k \ge \codelevel$ and
    $\ldsubeq[\Phi][\Delta][L'] k \proglevel \delta{\delta'}{\Gamma[\phi][\sigma]}$, we must
    show
    \[
      \ldtypeq[\Phi][\Delta][L'] k \proglevel \Nat \Nat \ze
    \]
    In the previous case of $i = \proglevel$, we have given the proof for $k \in \{\proglevel, \metalevel\}$, so
    essentially we only have one case $k = \codelevel$ left. %
    In this case, we apply the local substitution lemma so we do not have to introduce
    any local substitutions at all. %
    Looking up the rules in \Cref{sec:dt:semglob}, we need to show
    \[
      \ldStypwfge[\Phi][\Gamma[\phi][\sigma]][L'] \proglevel \proglevel \Nat \ze
    \]
    This is again have been given by the case of $i = \proglevel$, modulo converting weakenings
    to substitutions.  %
    Hence we conclude the proof.
  \end{itemize}
\end{proof}

\begin{lemma}
  \begin{mathpar}
    \inferrule
    {\lsctxwf {\typeof i} \Gamma \\ \tyequiv[L]{l}{l'}\Level}
    {\lstypeq i{\Ty l}{\Ty{l'}}{1 + l}}
  \end{mathpar}
\end{lemma}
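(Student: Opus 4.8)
The plan is to follow the proof of the preceding lemma for $\Nat$ almost verbatim, the only new ingredient being the extra universe-level equivalence carried by the logical-relation clause for universes. First I would unfold $\lstypeq i{\Ty l}{\Ty{l'}}{1 + l}$: the prerequisites $\lsgctx \Psi$ and $\lsctxwf{\typeof i}\Gamma$ are immediate (the second is the hypothesis and it also yields the first). After introducing $\tyequiv[L']{\phi}{\phi'}L$, a related pair $\ldSgsubeq[\Phi][L']{\sigma}{\sigma'}{\Psi[\phi]}$, a layer $k \ge i$, and a related pair $\ldsubeq[\Phi][\Delta][L']{k}{\typeof i}{\delta}{\delta'}{\Gamma[\phi][\sigma]}$, the goal simplifies. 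Since universes are inert under global and local substitutions and a universe substitution merely renames level variables, we have $\Ty l[\phi][\sigma][\delta] = \Ty{l[\phi]}$, $\Ty{l'}[\phi'][\sigma'][\delta'] = \Ty{l'[\phi']}$ and $(1+l)[\phi] = 1 + l[\phi]$, so it remains to show
\[
  \ldtypeq[\Phi][\Delta][L']{k}{\typeof i}{\Ty{l[\phi]}}{\Ty{l'[\phi']}}{1 + l[\phi]}.
\]

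For $k \in \{\proglevel, \metalevel\}$ this is discharged by the universe clause of the Kripke logical relation with $l_1 := l[\phi]$ and $l_2 := l'[\phi']$: both reductions are zero-step because $\Ty{l[\phi]}$ and $\Ty{l'[\phi']}$ are already weak head normal forms, and the well-formedness side condition $\lttypwf{k}{\Ty{l[\phi]}}{1 + l[\phi]}$ follows from the well-formedness rule for universes once we supply $\lpjudge{\typeof k}\Delta$ (escape of the related local substitution, then presupposition) and $\typing[L']{l[\phi]}\Level$ (presupposition of $\tyequiv[L]l{l'}\Level$ gives $\typing[L]l\Level$, then the universe-substitution lemma with $\typing[L']\phi L$). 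The two level-equivalence premises are $\tyequiv[L']{l[\phi]}{l[\phi]}\Level$, which is reflexivity, and $\tyequiv[L']{l'[\phi']}{l[\phi]}\Level$, which follows from $\tyequiv[L]l{l'}\Level$ and $\tyequiv[L']\phi{\phi'}L$ by the lemma that universe substitution respects level equivalence, together with symmetry and transitivity.

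The only remaining case is $i = \codelevel$ with $k = \codelevel$, where the relation unfolds to the inductive syntactic-structure judgment: I must produce $\ldtypwf[\Phi][\Delta][L']{\codelevel}{\proglevel}{\Ty{l'[\phi']}}{1 + l[\phi]}$, bearing in mind that the syntactic equality of types built into the definition of related types at layers $\codelevel$ and $\proglevel$ is taken modulo level equivalence (which holds since $\tyequiv[L']{l[\phi]}{l'[\phi']}\Level$, as just shown). By the $\ldtypwf{\codelevel}{\proglevel}{}{}$ rule for universes this requires $\tyequiv[L']{1 + l[\phi]}{1 + l[\phi]}\Level$ (reflexivity) and $\ldStypwfge{\proglevel}{\proglevel}{\Ty{l[\phi]}}{1 + l[\phi]}$ in the context $\Gamma[\phi][\sigma]$; the latter is exactly the $j = \proglevel$ statement established in the previous step, reused after converting weakenings to local substitutions and invoking the local-substitution lemma to drop $\delta$ — the same manoeuvre as in the $\Nat$ lemma.

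I do not anticipate any genuine difficulty: the argument is structurally identical to that of the $\Nat$ lemma. The only points needing a little care are threading the chain of universe-level equivalences correctly through the asymmetry between $\phi$ and $\phi'$ (and between the two sides of $\tyequiv[L]l{l'}\Level$), and remembering to route the $k = \codelevel$ subcase through the inductive judgment $\ldtypwf{\codelevel}{\proglevel}{}{}$ so as to reuse — rather than re-prove — the $j = \proglevel$ work.
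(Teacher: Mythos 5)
Your proposal is correct and takes the same route the paper intends: the paper dispatches this lemma with ``Similar to the previous lemma,'' i.e.\ the $\Nat$ case, and your argument is exactly that case-split on $k$ (with the $k = \codelevel$ subcase routed through $\ldtypwf{\codelevel}{\proglevel}{}{}$ reusing the $j = \proglevel$ work), plus the additional bookkeeping for threading the level equivalence $\tyequiv[L]l{l'}\Level$ through the $\phi$/$\phi'$ asymmetry. No gaps.
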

\begin{proof}
  Similar to the previous lemma. 
\end{proof}

\begin{lemma}
  \begin{mathpar}
    \inferrule
    {\lsctxwf {\typeof i} \Gamma}
    {\lstyequiv i \ze\ze \Nat \ze}
  \end{mathpar}
\end{lemma}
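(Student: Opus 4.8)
The plan is to follow the proof of semantic well-formedness of $\Nat$ given just above, using that $\ze$ is a closed normal form and hence invariant under universe, global, and local substitutions. Unfolding $\lstyequiv i \ze\ze\Nat\ze$ leaves three obligations: $\lsgctx\Psi$, which is read off from $\lsctxwf{\typeof i}\Gamma$; $\lstypwf{\typeof i}{\Nat}{\ze}$, which is the preceding lemma applied with layer $\typeof i$ in place of $i$ (this is legitimate because $\ttypeof$ is idempotent, so the required hypothesis is again $\lsctxwf{\typeof i}\Gamma$); and the main clause, namely that for every $\tyequiv[L']{\phi}{\phi'}L$, every $\ldSgsubeq[\Phi][L']{\sigma}{\sigma'}{\Psi[\phi]}$, every $k \ge i$, and every $\ldsubeq[\Phi][\Delta][L']k{\typeof i}{\delta}{\delta'}{\Gamma[\phi][\sigma]}$, we have $\ldtyequivt[\Phi][\Delta][L']k{\typeof i}{\ze}{\ze}{\Nat}{\ze}$. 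Since all the substitutions act trivially on $\ze$ and on $\Nat$, this reduces to exhibiting a derivation $\D :: \ldtypeq[\Phi][\Delta][L']k{\typeof i}{\Nat}{\Nat}{\ze}$ together with a proof of $\ldtyequiv[\Phi][\Delta][L']k{\typeof i}{\ze}{\ze}{\D}$.

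For $\D$ I would take the natural-number clause of the logical relation for types, whose only premises are the empty reduction sequences $\ttypreds[\Phi][\Delta][L']{k}{\Nat}{\Nat}{\ze}$ on both sides; these require $\lttypwf[\Phi][\Delta][L']{k}{\Nat}{\ze}$, which follows from well-formedness of $\Delta$ at the appropriate layer, obtained by applying the escape lemmas to the related local substitution and the related local context. The derivation $\ldtyequiv[\Phi][\Delta][L']k{\typeof i}{\ze}{\ze}{\D}$ unfolds to $\ldtyeqnat[\Phi][\Delta][L']{k}{\ze}{\ze}$: $\ze$ is already a weak head normal form, so both sides reduce to $\ze$ in zero steps; $\lttrmgeq[\Phi][\Delta][L']k{\ze}{\ze}{\Nat}{\ze}$ is the Congruence law of the generic equivalence for $\ze$; and $\ldtynfeqnat[\Phi][\Delta][L']{k}{\ze}{\ze}$ is an axiom of the $\ldtyeqnat$ relation.

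It then remains to run through the layer cases exactly as in the $\Nat$ lemma. When $i = \metalevel$ we have $k = \metalevel$, and when $i = \proglevel$ we have $k \in \{\proglevel,\metalevel\}$; in both situations the layer appearing in $\D$ coincides with $\typeof i$ and the construction above applies verbatim. The only genuinely new subcase is $i = \codelevel$ with $k = \codelevel$, where, as in the remark following the semantic judgments, I would first invoke the local-substitution lemma for the semantic judgments to discard $\delta$ and $\delta'$, so that the goal becomes the inductive tracking judgment $\ldtyping[\Phi][\Gamma[\phi][\sigma]][L']{\codelevel}{\proglevel}{\ze}{\Nat}{\ze}$ of \Cref{sec:dt:semglob}; its rule for $\ze$ asks only for $\tyequiv[L']{\ze}{\ze}{\Level}$ (reflexivity), for the stable well-formedness of $\Nat$ (supplied by the $\Nat$ lemma, with the weakenings in its statement turned into the relevant identity substitutions), and for the stability of $\ze : \Nat$, which is precisely what the $k \in \{\proglevel,\metalevel\}$ cases already establish. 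I expect no real obstacle: the work is entirely layer bookkeeping — invoking the escape lemmas at layer $\typeof k$ so that $\Delta$ is well-formed before the typing rule for $\ze$ fires, and keeping track of the idempotence of $\ttypeof$ so the $\Nat$ lemma can be reused at the shifted layer.
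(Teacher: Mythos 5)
Your proposal is correct and follows essentially the same route as the paper: case analysis on the layer $i$, with the $i \in \{\proglevel, \metalevel\}$ cases handled by the $\Nat$ clause of the logical relation (empty reduction, the congruence law for $\ze$, and the axiom $\ldtynfeqnat{\ze}{\ze}$), and the $i = \codelevel$, $k = \codelevel$ subcase reduced to the tracking judgment of \Cref{sec:dt:semglob} via the semantic local-substitution lemma. You are in fact slightly more careful than the paper in two places: you observe that $\lstypwf{\typeof i}{\Nat}{\ze}$ requires the preceding $\Nat$ lemma to be applied at layer $\typeof i$, which is licensed by idempotence of $\ttypeof$ (the paper writes $\lstypwf i \Nat \ze$, glossing over this point), and you spell out all three premises of the $\ldtyping \codelevel \proglevel \ze T l$ rule where the paper only names the $\ldStypingge$ one.
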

\begin{proof}
  From the previous lemma, we obtain $\lsgctx \Psi$, $\lsctxwf {\typeof i} \Gamma$ and
  $\lstypwf i \Nat \ze$. %
  Simulating the previous lemma, assuming $\tyequiv[L']{\phi}{\phi'}L$ and
  $\ldSgsubeq[\Phi][L'] \sigma{\sigma'}{\Psi[\phi]}$, we have to consider all
  $i \ge \codelevel$. %
  \begin{itemize}
  \item[Case $i = \metalevel$] Then assuming
    $\ldsubeq[\Phi][\Delta][L'] \metalevel \metalevel \delta{\delta'}{\Gamma[\phi][\sigma]}$, we must
    show
    \[
      \ldtyequivt[\Phi][\Delta][L'] \metalevel \metalevel \ze \ze \Nat \ze
    \]
    This is the same as showing
    \[
      \D :: \ldtypeq[\Phi][\Delta][L'] \metalevel \metalevel \Nat \Nat \ze \tand
      \ldtyequiv[\Phi][\Delta][L'] \metalevel \metalevel \ze \ze \D
    \]
    This is immediate by the congruence law of the generic equivalence and the
    definition of the logical relations.
    
  \item[Case $i = \proglevel$] Then assuming some $k \ge \proglevel$ and
    $\ldsubeq[\Phi][\Delta][L'] k \proglevel \delta{\delta'}{\Gamma[\phi][\sigma]}$, we must
    show
    \[
      \D :: \ldtypeq[\Phi][\Delta][L'] k \proglevel \Nat \Nat \ze \tand
      \ldtyequiv[\Phi][\Delta][L'] k \proglevel \ze \ze \D
    \]
    Following a similar proof to the case above, we also establish this case knowing
    $k \in \{\proglevel, \metalevel\}$.
    
  \item[Case $i = \codelevel$] Assuming some $k \ge \codelevel$ and
    $\ldsubeq[\Phi][\Delta][L'] k \proglevel \delta{\delta'}{\Gamma[\phi][\sigma]}$, we must
    show
    \[
      \ldtyequivt[\Phi][\Delta][L'] k \proglevel \ze \ze \Nat \ze
    \]
    The only additional analysis is when $k = \codelevel$ as other values for $k$ have been
    considered in the previous case. %
    In this case, we apply local substitution lemma so we only need to prove
    \[
      \ldStypingge[\Phi][\Gamma[\phi][\sigma]][L'] \proglevel \proglevel \ze \Nat \ze
    \]
    This clearly has been proven in the previous case. 
  \end{itemize}
\end{proof}

\begin{lemma}
  \begin{mathpar}
    \inferrule
    {\lstyequiv i t{t'}\Nat\ze}
    {\lstyequiv i{\su t}{\su{t'}}\Nat\ze}
  \end{mathpar}
\end{lemma}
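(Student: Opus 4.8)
The plan is to follow the two immediately preceding lemmas (for $\ze$ and for $\Nat$) essentially verbatim, exploiting that $\su$ commutes with every substitution, so that $\su t[\phi][\sigma][\delta] = \su{(t[\phi][\sigma][\delta])}$ (and likewise on the primed side), and that $\su{(-)}$ is already a weak head normal form. From the hypothesis $\lstyequiv i t{t'}\Nat\ze$ I extract the presuppositions $\lsgctx\Psi$, $\lsctxwf{\typeof i}\Gamma$ and $\lstypwf{\typeof i}\Nat\ze$; these are exactly the side conditions demanded by the conclusion and are carried over unchanged. It then remains, for each $i \ge \codelevel$ (layer $\varlevel$ is excluded, since $\su$ does not occur there), to assume $\tyequiv[L']{\phi}{\phi'}L$, $\ldSgsubeq[\Phi][L']\sigma{\sigma'}{\Psi[\phi]}$, a layer $k \ge i$ and a related local substitution $\ldsubeq[\Phi][\Delta][L']k{\typeof i}\delta{\delta'}{\Gamma[\phi][\sigma]}$, and to produce the required membership in the logical relation.

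For $i = \metalevel$ and $i = \proglevel$ (so $k \in \{\proglevel, \metalevel\}$): instantiating the hypothesis gives $\ldtyeqnat k{t[\phi][\sigma][\delta]}{t'[\phi'][\sigma'][\delta']}$ after unfolding the $\Nat$-case of the logical relation. The escape lemma yields the generic equivalence $\lttrmgeq k{t[\phi][\sigma][\delta]}{t'[\phi'][\sigma'][\delta']}\Nat 0$, hence by presupposition and the typing rule for $\su$ also $\lttyping k{\su{(t[\phi][\sigma][\delta])}}\Nat\ze$ and its primed analogue, so that both sides reduce to themselves in zero steps as weak head normal forms. The Congruence law for the generic equivalence promotes the equivalence to $\lttrmgeq k{\su{(t[\cdots])}}{\su{(t'[\cdots])}}\Nat 0$, and feeding $\ldtyeqnat k{t[\cdots]}{t'[\cdots]}$ into the $\su$-rule of $\ldtynfeqnat$ produces $\ldtynfeqnat k{\su{(t[\cdots])}}{\su{(t'[\cdots])}}$. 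Assembling these gives $\ldtyeqnat k{\su{(t[\cdots])}}{\su{(t'[\cdots])}}$, which is precisely $\ldtyequiv k{\typeof i}{\su{(t[\cdots])}}{\su{(t'[\cdots])}}{\D}$ for $\D :: \ldtypeq k{\typeof i}\Nat\Nat\ze$; this also disposes of every $k \in \{\proglevel, \metalevel\}$.

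For $i = \codelevel$ the only remaining subcase is $k = \codelevel$. Here I invoke the local substitution lemma so that no local substitution need be introduced, and the goal becomes $\ldtyping \codelevel \proglevel{\su{(t[\phi][\sigma])}}\Nat\ze$ together with the syntactic equality of the two sides — the latter being immediate since at this layer everything is static code. The hypothesis at layer $\codelevel$ supplies $\ldtyping \codelevel \proglevel{t[\phi][\sigma]}\Nat\ze$, whence the $\su$-rule for the inductive judgment $\ldtyping \codelevel \proglevel{-}$ applies, its side premises $\tyequiv[L']{\ze}{\ze}\Level$, $\ldStypeqge \proglevel \proglevel\Nat\Nat\ze$ and $\ldStypingge \proglevel \proglevel{\su{(t[\phi][\sigma])}}\Nat\ze$ being furnished by the $i = \proglevel$ case already established (turning weakenings into substitutions, exactly as in the lemma for $\ze$).

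I do not expect any genuine obstacle: the lemma is routine. The only mild subtlety is the customary layer bookkeeping — routing the $\codelevel$-layer case through the inductive $\ldtyping \codelevel \proglevel{-}$ judgment (which additionally records syntactic data) rather than through the Kripke relation, and recycling the $\proglevel$ case to discharge its semantic side conditions — and this is handled exactly as in the $\ze$ and $\Nat$ lemmas just above.
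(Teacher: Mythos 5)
Your proposal is correct and follows the paper's own proof essentially step for step: case-split on $i$ (excluding $\varlevel$), instantiate the hypothesis and unfold the $\Nat$ case to get $\ldtyeqnat$, escape to the generic equivalence and apply its congruence law, note $\su{(-)}$ is already a weak head normal form, and for $i = \codelevel$ route through the local substitution lemma and the inductive $\ldtyping \codelevel \proglevel{-}$ judgment, recycling the $\proglevel$ case for its semantic side premises. The paper's proof is terser but identical in substance.
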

\begin{proof}
  From $\lstyequiv i t{t'}\Nat\ze$, we also know $\lsgctx \Psi$,
  $\lsctxwf {\typeof i} \Gamma$ and \break $\lstypeq i \Nat\Nat\ze$. %
  Now assume $\tyequiv[L']{\phi}{\phi'}L$ and
  $\ldSgsubeq[\Phi][L'] \sigma{\sigma'}{\Psi[\phi]}$, we have to consider all
  $i \ge \codelevel$. %
  \begin{itemize}
  \item[Case $i = \metalevel$] Then assuming
    $\ldsubeq[\Phi][\Delta][L'] \metalevel \metalevel \delta{\delta'}{\Gamma[\phi][\sigma]}$, we must
    show
    \[
      \D :: \ldtypeq[\Phi][\Delta][L'] \metalevel \metalevel \Nat \Nat \ze \tand \ldtyequiv[\Phi][\Delta][L'] \metalevel \metalevel{\su t[\phi][\sigma][\delta]}{\su{t'}[\phi'][\sigma'][\delta']} \D
    \]
    From $\lstyequiv i t{t'}\Nat\ze$, we obtain such $\D$ and also
    \[
      \ldtyequiv[\Phi][\Delta][L'] \metalevel \metalevel{t[\phi][\sigma][\delta]}{t'[\phi'][\sigma'][\delta']} \D
    \]
    From here we obtain $\ldtynfeqnat i{\su t}{\su{t'}}$ which leads to our desired
    goal. 
  \item[Case $i = \proglevel$] Then assuming some $k \ge \proglevel$ and
    $\ldsubeq[\Phi][\Delta][L'] k \proglevel \delta{\delta'}{\Gamma[\phi][\sigma]}$, we must
    show
    \[
      \D :: \ldtypeq[\Phi][\Delta][L'] k \proglevel \Nat \Nat \ze \tand
      \ldtyequiv[\Phi][\Delta][L'] k \proglevel{\su t[\phi][\sigma][\delta]}{\su{t'}[\phi'][\sigma'][\delta']}\D
    \]
    We follow the previous case.
  \item [Case $i = \codelevel$] In this case, we only consider the most interesting case of
    $k = \codelevel$. %
    In this case, we apply local substitution lemma, so we know
    $\ldStypingge[\Phi][\Gamma[\phi][\sigma]][L'] \proglevel \proglevel{t[\phi][\sigma]}\Nat\ze$ and we
    must prove
    \[
      \ldStypingge[\Phi][\Gamma[\phi][\sigma]][L'] \proglevel \proglevel{\su t[\phi][\sigma]}\Nat\ze
    \]
    This clearly has been given by the previous case. %
    As we have seen in the last few proofs with $k = \codelevel$, it is a common pattern that
    we use the local substitution lemma to get rid of the local substitution lemma. %
    Then what is left for the proof obligation is given by $i = \proglevel$ modulo converting
    weakenings to substitutions. %
    Essentially, the semantics of layer $\codelevel$ simply remembers the derivation given by
    the semantic rules. %
    For this reason, we will keep cases of $i = \codelevel$ short.
  \end{itemize}
\end{proof}

The semantic rules are pretty sensitive to the orders in which they are proved. %
To handle $\Pi$ types, it is more convenient if we have the rules for contexts
ready.
\begin{lemma}
  \begin{mathpar}
  \inferrule
  {\lsgctx \Psi}
  {\lsctxeq i \cdot\cdot}
  \end{mathpar}
\end{lemma}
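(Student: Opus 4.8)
The plan is to unfold the definition of $\lsctxeq i \cdot\cdot$ and observe that the empty local context is a fixed point of every substitution in sight, so that the proof obligation collapses to the base case of the Kripke logical relation for local contexts. Throughout, $i \in \{\proglevel,\metalevel\}$, as required by the definition of $\lsctxeq$.

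First I would expand the target: $\lsctxeq i \cdot\cdot$ asks for $\lsgctx \Psi$ --- which is exactly the hypothesis --- together with the clause that, for every universe-substitution pair with $\tyequiv[L']{\phi}{\phi'}{L}$, every semantically related pair $\ldSgsubeq[\Phi][L']{\sigma}{\sigma'}{\Psi[\phi]}$, and every layer $k \ge i$, one has $\ldSctxeq[\Phi][L'] k{\typeof i}{\cdot[\phi][\sigma]}{\cdot[\phi'][\sigma']}$. Since applying a universe substitution and a global substitution to $\cdot$ leaves it unchanged, $\cdot[\phi][\sigma] = \cdot$, so the obligation is simply $\ldSctxeq[\Phi][L'] k{\typeof i}\cdot\cdot$. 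Next I would invoke the base rule for related local contexts from \Cref{sec:dt:klogrel}: the empty contexts are related at any layers precisely when the ambient global context is well-formed, i.e.\ the remaining obligation is $\judge[L']{\Phi}$.

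To produce $\judge[L']{\Phi}$ I would go through the given $\ldSgsubeq[\Phi][L']{\sigma}{\sigma'}{\Psi[\phi]}$: by the Escape lemma for semantic global contexts and substitutions (\Cref{sec:dt:semglob}) this yields $\ptyequiv[\Phi][L']{\sigma}{\sigma'}{\Psi[\phi]}$, and then the presupposition results --- first the presupposition of the equivalence judgment, then \Cref{lem:dt:presup1} for the resulting typing judgment --- give $\judge[L']{\Phi}$. (One could instead first extract a derivation $\ldSgctxeq[L']{\Psi[\phi]}{\Phi'}$ from $\ldSgsubeq$; but routing through $\ptyequiv$ is the most direct way to reach well-formedness of $\Phi$ itself.)

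I do not expect a genuine obstacle here: this is a base case with no recursion. The only point demanding a little care is the bookkeeping of the nested quantifiers in the definition of $\lsctxeq$ and the confirmation that, for $\Gamma = \Delta = \cdot$, neither $\phi$, $\sigma$, nor any local substitution plays any role, so that the obligation for every $k \ge i$ reduces uniformly to the single fact $\judge[L']{\Phi}$. In particular --- unlike the later cases for context extensions --- no appeal to the local substitution lemma or to the stability predicates of \Cref{sec:dt:semglob} is needed.
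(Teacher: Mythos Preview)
Your proposal is correct and is exactly the unfolding that the paper's one-word proof ``Immediate'' abbreviates: reduce to the base case $\ldctxeq[\Phi][L'] k{\typeof i}\cdot\cdot$, whose sole premise $\judge[L']{\Phi}$ you extract from the related global substitutions via escape and presupposition. There is nothing to add.
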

\begin{proof}
  Immediate.
\end{proof}

\begin{lemma}
  \begin{mathpar}
  \inferrule
  {\lsgctx \Psi \\ g : \Ctx \in \Psi}
  {\lsctxeq i{g}{g}}
  \end{mathpar}
\end{lemma}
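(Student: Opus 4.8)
The plan is to unfold the definition of $\lsctxeq i{g}{g}$ from \Cref{sec:dt:semglob} and reduce it to a lookup fact about semantically related global substitutions. The conjunct $\lsgctx \Psi$ is given by hypothesis, so fix $\tyequiv[L']{\phi}{\phi'}L$, a derivation $\D$ witnessing $\ldSgsubeq[\Phi][L'] \sigma{\sigma'}{\Psi[\phi]}$, and a layer $k \ge i$; I must establish $\ldSctxeq[\Phi][L'] k{\typeof i} {g[\phi][\sigma]}{g[\phi'][\sigma']}$. Since universe substitutions do not act on context variables, $g[\phi] = g[\phi'] = g$, and $g[\sigma] = \sigma(g)$, $g[\sigma'] = \sigma'(g)$, so the goal simplifies to $\ldSctxeq[\Phi][L'] k{\typeof i} {\sigma(g)}{\sigma'(g)}$.

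The key step is a projection lemma for $\ldSgsubeq$: from $g : \Ctx \in \Psi$ — hence $g : \Ctx \in \Psi[\phi]$, as universe substitution preserves context-variable bindings — I would extract $\ldSctxeqge[\Phi][L'] \proglevel \proglevel {\sigma(g)}{\sigma'(g)}$ by induction on $\D$, peeling off the bindings of $\Psi[\phi]$ until $g : \Ctx$ is reached. At that binding the corresponding clause in the definition of $\ldSgsubeq$ directly supplies an $\ldSctxeqge \proglevel \proglevel$ component for the two substituted local contexts, while the trailing bindings are absorbed using the weakening lemma for the $\ldSctxeqge$-style relation. This is exactly the style of argument carried out in the \emph{Reflexive Global Weakening} lemma and in the global-variable cases of the fundamental theorems, so it should go through routinely.

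To finish, I would instantiate $\ldSctxeqge[\Phi][L'] \proglevel \proglevel {\sigma(g)}{\sigma'(g)}$ at the identity global weakening and at the layer $k$ — legitimate since $k \ge i \ge \proglevel$ — obtaining $\ldctxeq[\Phi][L'] k \proglevel {\sigma(g)}{\sigma'(g)}$. If $i = \proglevel$ then $\typeof i = \proglevel$ and this is already the goal. If $i = \metalevel$ then necessarily $k = \metalevel$ and $\typeof i = \metalevel$, so I would apply the \emph{Layering Restriction} lemma for local contexts and substitutions to lift $\ldctxeq \metalevel \proglevel {\sigma(g)}{\sigma'(g)}$ to $\ldctxeq \metalevel \metalevel {\sigma(g)}{\sigma'(g)}$, which is the goal.

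The main obstacle is the projection lemma itself, namely correctly tracking the universe weakenings accumulated along the inductive definition of $\ldSgctxeq$/$\ldSgsubeq$ when the context variable $g$ is not the last binding of $\Psi$, and making sure the layer indices $k$, $i$ and $\typeof i$ line up — in particular recognising that the case $i = \metalevel$ genuinely requires the Layering Restriction lemma rather than being immediate. Everything else is routine bookkeeping with the definitions from \Cref{sec:dt:semglob,sec:dt:klogrel}.
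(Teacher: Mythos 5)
Your proof follows the same approach as the paper: look up $g$ in the related global substitutions to obtain $\ldSctxeqge[\Phi][L'] \proglevel \proglevel{\sigma(g)}{\sigma'(g)}$, instantiate it at the required level $k$, and when $i = \metalevel$ apply the Layering Restriction lemma to lift the local-context relation from $j = \proglevel$ to $j = \metalevel$. The only cosmetic difference is that you spell out the $i = \proglevel$ case explicitly while the paper dismisses it as similar.
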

\begin{proof}
  Now assuming $\tyequiv[L']{\phi}{\phi'}L$ and
  $\ldSgsubeq[\Phi][L'] \sigma{\sigma'}{\Psi[\phi]}$, we have to consider all
  $i \in \{\proglevel, \metalevel\}$. %
  We only consider $i = \metalevel$ here as the proof for $i = \proglevel$ is very similar.
  We know $g : \Ctx \in \Psi[\phi]$ as well. %
  Then we have the following after lookup
  \[
    \ldSctxeqge[\Phi][L'] \proglevel \proglevel{\sigma(g)}{\sigma'(g)}
  \]
  We are very close to our goal
  \[
    \ldSctxeq[\Phi][L'] \metalevel \metalevel{\sigma(g)}{\sigma'(g)}
  \]
  First we obtain $\ldSctxeq[\Phi][L'] \metalevel \proglevel{\sigma(g)}{\sigma'(g)}$. %
  Then by layering restriction, we have the goal by lifting $\proglevel$ to $\metalevel$.
\end{proof}

\begin{lemma}
  \begin{mathpar}
    \inferrule
    {\lsctxeq i \Gamma\Delta \\ \lstypeq i T{T'} l \\ \tyequiv[L]l{l'}\Level}
    {\lsctxeq i{\Gamma, x : T \at l}{\Delta, x : T' \at{l'}}}
  \end{mathpar}
\end{lemma}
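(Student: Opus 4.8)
The plan is to unfold the semantic definitions and reduce the goal to a single application of the rule from \Cref{sec:dt:klogrel} that relates \emph{extended} local contexts. From the premise $\lsctxeq i \Gamma\Delta$ I first extract $\lsgctx\Psi$ and $\lsctxwf{\typeof i}\Gamma$, and from $\lstypeq i T{T'}l$ I also have $\lstypwf{\typeof i}T l$; together with $\tyequiv[L]l{l'}\Level$ these supply all the well-formedness side conditions that are packaged into the target semantic judgment. Since $\lsctxeq\,$ is only defined for $i \in \{\proglevel,\metalevel\}$, after introducing $\tyequiv[L']\phi{\phi'}L$, $\ldSgsubeq[\Phi][L']\sigma{\sigma'}{\Psi[\phi]}$ and some $k \ge i$ I never have to consider $k \in \{\varlevel,\codelevel\}$ — so, unlike the earlier $\Nat$ lemma, there is no $k = \codelevel$-style gymnastics to perform here. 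The remaining obligation is to produce
\[
  \ldSctxeq[\Phi][L'] k{\typeof i}{\Gamma[\phi][\sigma], x : T[\phi][\sigma]\at{l[\phi]}}{\Delta[\phi'][\sigma'], x : T'[\phi'][\sigma']\at{l'[\phi']}}.
\]

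I would build this by the extension rule, which asks for three things. First, the weakening-indexed family: for every $\alpha :: L'' \sep \Phi' \To L' \sep \Phi$ I need $\ldctxeq[\Phi'][L''] k{\typeof i}{\Gamma[\phi][\sigma]}{\Delta[\phi'][\sigma']}$, which I get from $\lsctxeq i \Gamma\Delta$ instantiated at $\phi,\phi',\sigma,\sigma',k$ followed by the weakening property of the logical relation for local contexts. Second, the stability datum $\D :: \ldStypeq \F k{\typeof i}{T[\phi][\sigma]}{T'[\phi'][\sigma']}{l[\phi]}$: given $\alpha=(\theta,\gamma)$ and $\ldsubeq k{\typeof i}\delta{\delta'}{\Gamma[\phi][\sigma][\alpha]}$ I obtain the required $\ldtypeq k{\typeof i}{T[\phi][\sigma][\alpha][\delta]}{T'[\phi'][\sigma'][\alpha][\delta']}{l[\phi]}$ by unfolding $\lstypeq i T{T'}l$, feeding it $\phi$ post-composed with $\theta$, the related global substitution $\sigma,\sigma'$ transported along $\gamma$ (using the weakening stability of related global substitutions together with the universe-substitution lemma), and then $\delta,\delta'$; the conclusion lines up with the goal after rewriting with the commutation identities of \Cref{lem:dt:interact} and the substitution algebra, plus a use of right irrelevance to move the hypothesis from whatever context derivation it was stated with onto $\F(\alpha)$. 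Third, the level condition $\tyequiv[L']{l[\phi]}{l'[\phi']}\Level$: the universe-substitution lemma gives $l[\phi]\approx l'[\phi]$ in $L'$ from $\tyequiv[L]l{l'}\Level$, and $l'[\phi]\approx l'[\phi']$ follows from $\tyequiv[L']\phi{\phi'}L$ by a routine induction on $l'$; transitivity closes it. Assembling these via the extension rule yields the goal.

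The main obstacle, as in the neighbouring fundamental-theorem cases, is purely the substitution bookkeeping in the second ingredient: correctly absorbing the composite weakening $\alpha$ into the already-applied $\phi$ and $\sigma$, checking that $T[\phi][\sigma][\alpha][\delta]$ is exactly the term that $\lstypeq$ delivers after the post-composition, and keeping the term layer $k$ and the type layer $\typeof i$ straight throughout. Everything else — the construction of $\F$, the level equivalence, and the extracted well-formedness conditions — is mechanical.
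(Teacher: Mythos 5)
Your proof is correct and takes essentially the same route as the paper: instantiate the semantic premises with $\tyequiv[L']\phi{\phi'}L$ and $\ldSgsubeq[\Phi][L']\sigma{\sigma'}{\Psi[\phi]}$, observe that $k\ge i\in\{\proglevel,\metalevel\}$ rules out the $\varlevel/\codelevel$ gymnastics, build the tail-relatedness $\F$ by weakening the instantiated hypothesis, and discharge the stability datum $\D$ by feeding $\lstypeq i T{T'}l$ the post-composed universe and global substitutions and appealing to irrelevance. The paper compresses the entire second step into the phrase ``immediate by the semantic judgment, after converting weakenings into universe and global substitutions,'' so your version is simply a more explicit rendering of the same argument.
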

\begin{proof}
  Now assuming $\tyequiv[L']{\phi}{\phi'}L$ and
  $\ldSgsubeq[\Phi][L'] \sigma{\sigma'}{\Psi[\phi]}$, we have to consider all
  $i \in \{\proglevel, \metalevel\}$. %
  We only consider $i = \metalevel$ here as the proof for $i = \proglevel$ is very similar. %
  We should prove
  \[
    \ldSctxeq[\Phi][L'] \metalevel \metalevel{\Gamma[\phi][\sigma], x : T[\phi][\sigma] \at{l[\phi]}}{\Delta[\phi'][\sigma'], x : T'[\phi'][\sigma'] \at{l'[\phi']}}
  \]
  We first obtain 
 \[
    \ldSctxeq[\Phi][L'] \metalevel \metalevel{\Gamma[\phi][\sigma]}{\Delta[\phi'][\sigma']}
  \]
  To obtain the goal, we must show $T[\phi][\sigma] \approx T'[\phi'][\sigma']$ is
  stable under local substitutions. %
  This is immediate by the semantic judgment, after converting weakenings into
  universe and global substitutions.
\end{proof}

\begin{lemma}
  \begin{mathpar}
  \inferrule
  {\lsctxwf{\typeof i}\Gamma \\ \text{$\Gamma$ ends with $\cdot$} \\ |\Gamma| = k'}
  {\lssubeq i {\cdot^{k'}}{\cdot^{k'}}{\cdot}}
  \end{mathpar}
\end{lemma}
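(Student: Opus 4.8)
The statement to establish is the semantic rule producing $\lssubeq i {\cdot^{k'}}{\cdot^{k'}}{\cdot}$ from $\lsctxwf{\typeof i}\Gamma$ together with the side conditions that $\Gamma$ ends with $\cdot$ and $|\Gamma| = k'$. The plan is to unfold the definition of $\lssubeq$ and discharge each of its requirements. First I would note that $\lsctxwf{\typeof i}\Gamma$ already yields $\lsgctx\Psi$, and that the semantic rule for the empty local context proved above (which gives $\lsctxeq{i'}{\cdot}{\cdot}$ from $\lsgctx\Psi$ for any $i'\in\{\proglevel,\metalevel\}$, in particular $i'=\typeof i$) supplies $\lsctxwf{\typeof i}\cdot$. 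Together with the hypothesis $\lsctxwf{\typeof i}\Gamma$ itself, these are the three standing preconditions of $\lssubeq$.

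The substance is the implication clause. So I would fix $\tyequiv[L']{\phi}{\phi'}L$, $\ldSgsubeq[\Phi][L']{\sigma}{\sigma'}{\Psi[\phi]}$, a layer $k\ge i$, and a related pair $\ldsubeq[\Phi][\Delta'][L']{k}{\typeof i}{\delta_1}{\delta_1'}{\Gamma[\phi][\sigma]}$, and must establish $\ldsubeq[\Phi][\Delta'][L']{k}{\typeof i}{\cdot^{k'}[\phi][\sigma] \circ \delta_1}{\cdot^{k'}[\phi'][\sigma'] \circ \delta_1'}{\cdot[\phi][\sigma]}$. Universe and global substitution have no effect on an empty local substitution nor on the empty local context, so $\cdot^{k'}[\phi][\sigma] = \cdot^{k'}$, $\cdot^{k'}[\phi'][\sigma'] = \cdot^{k'}$, and $\cdot[\phi][\sigma] = \cdot$; and by the composition rule for local substitutions $\cdot^{k'} \circ \delta_1 = \cdot_{\widecheck{\delta_1}}^{\widehat{\delta_1}}$ and $\cdot^{k'} \circ \delta_1' = \cdot_{\widecheck{\delta_1'}}^{\widehat{\delta_1'}}$. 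It remains only to check that these two empty substitutions are equal and of the shape demanded by the definition of the logical relation for local substitutions at the empty codomain context $\cdot$.

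For that I would use three small observations. First, from $\ldsubeq[\Phi][\Delta'][L']{k}{\typeof i}{\delta_1}{\delta_1'}{\Gamma[\phi][\sigma]}$ the escape lemma yields $\lpjudge k \Delta'$ and well-typedness of $\delta_1$, whose base therefore forces $\widehat{\delta_1} = |\Delta'|$ and $\widecheck{\delta_1}$ to be the base context variable of $\Delta'$ (nothing if $\Delta'$ ends with $\cdot$, a $g$ with $g:\Ctx\in\Phi$ if $\Delta'$ ends with $g$). Second, since $\Gamma$ ends with $\cdot$ and substitution preserves the base, $\Gamma[\phi][\sigma]$ ends with $\cdot$; unfolding the logical relation for local substitutions at this codomain, the base parts of $\delta_1$ and $\delta_1'$ coincide syntactically, and as $\widehat{\cdot}$ and $\widecheck{\cdot}$ ignore the cons'ed prefix, $\widehat{\delta_1} = \widehat{\delta_1'}$ and $\widecheck{\delta_1} = \widecheck{\delta_1'}$. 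Hence both composites equal the single empty substitution $\cdot_{g?}^{|\Delta'|}$, which is precisely the form required. When $k\in\{\varlevel,\codelevel\}$ the relation $\ldsubeq$ degenerates to the syntactic-tracking judgment $\ldsubst$ plus a syntactic equality; there the goal reduces to $\ldsubst \codelevel \proglevel{\cdot_{g?}^{|\Delta'|}}{\cdot}$, which holds by the matching base rule, its side condition $\ldSsubstge \proglevel \proglevel{\cdot_{g?}^{|\Delta'|}}{\cdot}$ being immediate since composing an empty substitution with any related pair again yields an empty substitution of the same length (\Cref{lem:dt:lsubst-ge}).

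I expect the main obstacle to be the second observation together with the $k\le\codelevel$ bookkeeping: extracting from the logical relation that $\widehat{\cdot}$ and $\widecheck{\cdot}$ agree on $\delta_1$ and $\delta_1'$ and are pinned down by $\Delta'$, and verifying that the resulting empty composite satisfies $\ldsubst$ and $\ldSsubstge \proglevel \proglevel{}{}$. None of this is conceptually deep — it mirrors the analysis of $\cdot^m$ already carried out in the context-variable development of \Cref{sec:cv} — but it is the only part of the argument that is not a mechanical unfolding of definitions.
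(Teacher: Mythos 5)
Your proof is correct and follows essentially the same route as the paper's: extract $\lsgctx\Psi$, fix the semantic assumptions, reduce the goal to the composition $\cdot^{k'}\circ\delta_1 = \cdot_{\widecheck{\delta_1}}^{\widehat{\delta_1}}$ and observe it agrees with $\cdot^{k'}\circ\delta_1'$ because both are pinned down by the local context $\Delta'$ in which they live. The paper compresses this last step to ``by definition'' and dispatches the $k\in\{\varlevel,\codelevel\}$ cases by the local substitution lemma and the rule in \Cref{sec:dt:semglob} — exactly the bookkeeping you flag — so you have simply spelled out the same argument in more detail.
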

\begin{proof}
  From $\lsctxwf {\typeof i} \Gamma$, we also know $\lsgctx \Psi$. %
  Assuming
  $\tyequiv[L']{\phi}{\phi'}L$, $\ldSgsubeq[\Phi][L'] \sigma{\sigma'}{\Psi[\phi]}$. %
  We should consider all possible $i$.
  \begin{itemize}
  \item[Case $i = \metalevel$] Then assuming
    $\ldsubeq[\Phi][\Delta][L'] \metalevel \metalevel \delta{\delta'}{\Gamma[\phi][\sigma]}$, we must
    show
    \[
      \D :: \ldctxeq[\Phi][L'] \metalevel \metalevel \cdot \cdot
      \tand
      \ldsubeq[\Phi][\Delta][L'] \metalevel \metalevel{\cdot^{k'} \circ \delta}{\cdot^{k'} \circ \delta'}\D
    \]
    $\D$ is immediate. %
    Now we should consider the composition. %
    We know
    \[
      {\cdot^{k'} \circ \delta} = {\cdot^{k'} \circ \delta'} = \cdot_{\widecheck{\delta}}^{\widehat\delta}
    \]
    We then have the goal by definition.
  \item[Case $i = \proglevel$] Similar.
  \item[Case $i = \codelevel$] By the local substitution lemma and the rule in
    \Cref{sec:dt:semglob}, we conclude this case by repeating the previous case. 
  \item[Case $i = \varlevel$] Similar.
  \end{itemize}
\end{proof}

\begin{lemma}
  \begin{mathpar}
  \inferrule
  {\lsctxwf{\typeof i}\Gamma \\ g : \Ctx \in \Psi \\ \text{$\Gamma$ ends with $g$} \\ |\Gamma| = k'}
  {\lssubeq i {\cdot_g^{k'}}{\cdot_g^{k'}}{\cdot}}
  \end{mathpar}
\end{lemma}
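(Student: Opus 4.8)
The plan is to replay the proof of the immediately preceding lemma (the case for $\cdot^{k'}$), inserting one extra case analysis to account for the fact that, unlike $\cdot^{k'}$, the literal $\cdot_g^{k'}$ is affected by a global substitution. Concretely, by the global substitution operation in \Cref{sec:dt:syn}, $\cdot_g^{k'}[\sigma]$ is $\cdot^{|\sigma(g)| + k'}$ when $\sigma(g)$ ends with $\cdot$, and $\cdot_{g'}^{|\sigma(g)| + k'}$ when $\sigma(g)$ ends with a context variable $g'$; universe substitutions have no effect, so $\cdot_g^{k'}[\phi] = \cdot_g^{k'}$. So the proof is the same shape as before but branches on the shape of $\sigma(g)$.

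First I would extract $\lsgctx \Psi$ from $\lsctxwf{\typeof i}\Gamma$, fix $\tyequiv[L']{\phi}{\phi'}L$ and $\ldSgsubeq[\Phi][L']{\sigma}{\sigma'}{\Psi[\phi]}$, and, using $g : \Ctx \in \Psi[\phi]$, read off the lookup through $\ldSgsubeq$ at a $\Ctx$-binding to obtain $\ldSctxeqge \proglevel \proglevel{\sigma(g)}{\sigma'(g)}$. From this I would note two structural facts about related local contexts (both immediate from the base cases of $\ldctxeq$, which pair $\cdot$ with $\cdot$ and $g'$ with $g'$): $\sigma(g)$ and $\sigma'(g)$ have the same length, and they have the same base --- either both end with $\cdot$, or both end with the same $g'$ with $g' : \Ctx \in \Phi$. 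Then, for each layer $i$ and each $k \ge i$, I proceed as in the previous lemmas. For $k \in \{\varlevel, \codelevel\}$ the goal reduces to the $k \in \{\proglevel, \metalevel\}$ case via the local substitution lemma (\Cref{lem:dt:lsubst-ge}) together with the semantic rules of \Cref{sec:dt:semglob} and \Cref{lem:dt:ldc-pm}, exactly as before. For $k \in \{\proglevel, \metalevel\}$, given a related $\ldsubeq[\Phi][\Delta'][L'] k{\typeof i}{\delta_1}{\delta_1'}{\Gamma[\phi][\sigma]}$, I must show $\ldsubeq[\Phi][\Delta'][L'] k{\typeof i}{(\cdot_g^{k'}[\sigma])\circ\delta_1}{(\cdot_g^{k'}[\sigma'])\circ\delta_1'}{\cdot}$; here I case-split on $\sigma(g)$. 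If $\sigma(g)$ ends with $\cdot$, the composition rules give $(\cdot_g^{k'}[\sigma])\circ\delta_1 = \cdot_{\widecheck{\delta_1}}^{\widehat{\delta_1}}$ and likewise for $\delta_1'$; if $\sigma(g)$ ends with $g'$, they give $\cdot_{g'}^{\widehat{\delta_1}}$ and $\cdot_{g'}^{\widehat{\delta_1'}}$. In either subcase, $\ldsubeq$ of $\delta_1$ with $\delta_1'$ forces their base components to be literally equal, hence $\widehat{\delta_1} = \widehat{\delta_1'}$ and $\widecheck{\delta_1} = \widecheck{\delta_1'}$ (the base of a related local substitution is determined, cf.\ \Cref{lem:cvar:semeq2geneq-lsubst} and the $\cdot^m$ case of the earlier Fundamental theorem), so the two composites coincide; and the resulting substitution is related to $\cdot$ over $\Delta'$ directly by the definition of $\ldsubeq$ at the empty context --- via the $\cdot$-ended base case in the first branch, and via the $g'$-ended base case (using that $\Delta'$ must itself end with $g'$, since $\delta_1$'s codomain $\Gamma[\phi][\sigma]$ ends with $g'$) in the second.

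I expect the main obstacle to be purely bookkeeping rather than conceptual: one must carefully line up the three layers of substitution ($\phi$, $\sigma$, $\delta_1$) against the composition rules of \Cref{sec:dt:syn}, and must be precise about which context is the source and which the codomain so that the length annotations $\widehat{-}$ and the base markers $\widecheck{-}$ match what the logical relation at $\cdot$ demands. Once the two auxiliary facts (related contexts share length and base; related local substitutions have equal base) are in place, the case split on $\sigma(g)$ closes the proof, essentially replaying the $\cdot_g^m$ case of the Fundamental theorem of \Cref{sec:cv:logrel} in the present layered, universe-polymorphic setting.
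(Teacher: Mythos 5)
Your proposal is correct and takes essentially the same approach as the paper, which proves this case by replaying the preceding $\cdot^{k'}$ lemma with an additional case analysis on the result of looking up $g$ in $\sigma$ (i.e., on whether $\sigma(g)$ ends with $\cdot$ or with some $g'$). Your supporting observations --- that related global substitutions produce related local contexts with matching base and length, and that the base components $\widehat{\delta_1}$, $\widecheck{\delta_1}$ of a local substitution are determined by its local context $\Delta'$ --- are precisely what makes the composites collapse and the case straightforward, just as the paper asserts.
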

\begin{proof}
  Similar to the previous lemma. %
  We will need to do a case analysis on the result of lookup of $g$, but otherwise the
  result is straightforward.
\end{proof}

\begin{lemma}
  \begin{mathpar}
  \inferrule
  {\lsctxwf{\typeof i}\Gamma \\ g : \Ctx \in \Psi \\ \text{$\Gamma$ ends with $g$} \\ |\Gamma| = k'}
  {\lssubeq i {\wk_g^{k'}}{\wk_g^{k'}}{g}}
  \end{mathpar}
\end{lemma}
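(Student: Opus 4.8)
The plan is to unfold the definition of $\lssubeq i {\wk_g^{k'}}{\wk_g^{k'}}{g}$ from \Cref{sec:dt:semglob} and argue exactly as in the three preceding lemmas for $\cdot^{k'}$ and $\cdot_g^{k'}$, the only new ingredient being that the codomain $g$ is a (possibly nonempty) context variable. From $\lsctxwf{\typeof i}\Gamma$ we get $\lsgctx\Psi$; fixing $\tyequiv[L']{\phi}{\phi'}L$ and $\ldSgsubeq[\Phi][L']{\sigma}{\sigma'}{\Psi[\phi]}$ we have $g : \Ctx \in \Psi[\phi]$, so looking $g$ up in the related global substitutions yields $\ldSctxeqge\proglevel\proglevel{\sigma(g)}{\sigma'(g)}$, i.e.\ the codomain contexts $g[\phi][\sigma] = \sigma(g)$ and $g[\phi'][\sigma'] = \sigma'(g)$ are related and stable under further weakenings and local substitutions. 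Moreover, a universe substitution leaves $\wk_g^{k'}$ untouched and a global substitution turns it into the local weakening substitution on $\sigma(g)$, so $\wk_g^{k'}[\phi][\sigma] = \wk^{k'}_{\sigma(g)}$ and the composition to be analyzed is $\wk^{k'}_{\sigma(g)}\circ\delta_1$.

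I then case-analyze on $i \in \{\varlevel,\codelevel,\proglevel,\metalevel\}$ following the established template. For $i = \metalevel$ (the only admissible $k$ is $\metalevel$ and $\typeof\metalevel = \metalevel$) I assume $\ldsubeq[\Phi][\Delta'][L']\metalevel\metalevel{\delta_1}{\delta_1'}{\Gamma[\phi][\sigma]}$ and must produce $\ldsubeq[\Phi][\Delta'][L']\metalevel\metalevel{\wk^{k'}_{\sigma(g)}\circ\delta_1}{\wk^{k'}_{\sigma(g)}\circ\delta_1'}{\sigma(g)}$. Since $\Gamma$ ends with $g$ and $|\Gamma| = k'$, the context $\Gamma[\phi][\sigma]$ is $\sigma(g)$ extended by $k'$ variable bindings, so the derivation of $\ldctxeq\metalevel\metalevel{\Gamma[\phi][\sigma]}{-}$ underlying the hypothesis is built, by $k'$ applications of the step case, on top of a derivation of $\ldctxeq\metalevel\metalevel{\sigma(g)}{-}$; peeling those $k'$ layers off (instantiating the universal quantifiers over $\alpha$ with the identity weakening) extracts a derivation for $\sigma(g)$ together with a related sub-substitution which, by the composition rules for local substitutions and \Cref{lem:dt:interact}, is precisely $\wk^{k'}_{\sigma(g)}\circ\delta_1$ (resp.\ $\delta_1'$). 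This is exactly the ``unravel $H_0$ $m$ times'' move used in the $\wk_g^m$ case of the Fundamental Theorem in \Cref{sec:cv:logrel}. The case $i = \proglevel$ is identical after noting $k \in \{\proglevel,\metalevel\}$ and $\typeof\proglevel = \proglevel$.

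For $i = \codelevel$ and $i = \varlevel$ I would first apply the local substitution lemma for the syntactic-tracking judgments (\Cref{lem:dt:lsubst-cp}) to discard the local substitution $\delta_1$, reducing the goal to the semantic clause of \Cref{sec:dt:semglob} for $\wk_g^k$, namely $\ldSsubstge\proglevel\proglevel{\wk_g^{k'}}{g}$ together with its syntactic bookkeeping. The former follows from the already-settled $i = \proglevel$ case modulo converting weakenings to substitutions, precisely as in the $k = \codelevel$ subcases of the previous lemmas, and it is also a direct instantiation---with the weakened context taken to be $g$ and the tail taken to be the $k'$ bindings of $\Gamma$---of the ``Local Weakenings'' lemma for these judgments applied to $\ldSctxeqge\proglevel\proglevel{g}{g}$, which holds since $g : \Ctx \in \Psi$.

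The main obstacle is the peeling/unraveling step: one must match the $k'$-fold inversion of the logical relation on the extended codomain context $\Gamma[\phi][\sigma]$ against the composition $\wk^{k'}_{\sigma(g)}\circ\delta_1$, using \Cref{lem:dt:interact} to check that peeling off one binding corresponds to pre-composing with $\wk^{1}$. Everything else is routine bookkeeping mirroring the $\cdot^{k'}$ and $\cdot_g^{k'}$ lemmas; the one genuine novelty is that $\sigma(g)$ is in general a nonempty context (or another context variable, or $\cdot$), so one must keep the relation $\ldSctxeqge\proglevel\proglevel{\sigma(g)}{\sigma'(g)}$ obtained from the lookup available throughout and, as in the proof of the semantic rule $\lsctxeq i g g$, be prepared to transport it between $j = \proglevel$ and $j = \metalevel$ via layering restriction.
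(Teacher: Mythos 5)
Your proposal takes essentially the same route as the paper's proof: derive $\lsgctx\Psi$ from $\lsctxwf{\typeof i}\Gamma$, look up $g$ in the related global substitutions to obtain $\ldSctxeqge \proglevel \proglevel{\sigma(g)}{\sigma'(g)}$, use layering restriction to move from $j = \proglevel$ to $j = \metalevel$, and observe that the compositions $\wk^{k'}_{\sigma(g)}\circ\delta_1$ and $\wk^{k'}_{\sigma'(g)}\circ\delta_1'$ amount to popping $k'$ entries off the related local substitutions. Your ``peel $k'$ layers off the $\ldctxeq$ derivation'' is the same move the paper compresses into ``popping off $k'$ terms ... by irrelevance''; the only cosmetic difference is that you reuse the peeled derivation for the existential witness whereas the paper constructs it from the lookup and then invokes irrelevance.
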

\begin{proof}
  From $\lsctxwf {\typeof i} \Gamma$, we also know $\lsgctx \Psi$. %
  The semantic well-formedness for $g$ is established by a previous lemma.
  
  Now we assume $\tyequiv[L']{\phi}{\phi'}L$,
  $\ldSgsubeq[\Phi][L'] \sigma{\sigma'}{\Psi[\phi]}$. %
  We should consider all possible $i$.
  \begin{itemize}
  \item[Case $i = \metalevel$] Then assuming
    $\ldsubeq[\Phi][\Delta][L'] \metalevel \metalevel \delta{\delta'}{\Gamma[\phi][\sigma]}$, we must
    show
    \[
      \D :: \ldctxeq[\Phi][L'] \metalevel \metalevel{\sigma(g)}{\sigma'(g)}
      \tand
      \ldsubeq[\Phi][\Delta][L'] \metalevel \metalevel{\wk_g^{k'}[\sigma] \circ \delta}{\wk_g^{k'}[\sigma'] \circ \delta'}\D
    \]
    $\D$ is obtained by looking up $g$ in $\sigma$, from which we get
    \[
      \ldSctxeq[\Phi][L'] \metalevel \proglevel{\sigma(g)}{\sigma'(g)}
    \]
    We have $\D$ by layering restriction. %

    For composition, we have
    \begin{gather*}
      \wk_g^{k'}[\sigma] \circ \delta = \wk_{\sigma(g)}^{k'} \circ \delta \\
      \wk_g^{k'}[\sigma'] \circ \delta' = \wk_{\sigma'(g)}^{k'} \circ \delta'
    \end{gather*}
    Effectively, this is the same as popping off $k'$ terms from $\delta$ and
    $\delta'$ simultaneously. %
    We have the goal by irrelevance.
  \item[Case $i = \proglevel$] Similar. 
  \item[Case $i \in \{\varlevel, \codelevel\}$]
    Similarly, we use the previous case.
  \end{itemize}
\end{proof}

\begin{lemma}
    \begin{mathpar}
    \inferrule
    {\typing[L]l\Level \\ \lssubeq i {\delta}{\delta'}{\Delta} \\ \lstypwf[\Psi][\Delta]{\typeof i}T l \\
      \lstyequiv i {t}{t'}{T[\delta]} l}
    {\lssubeq i {\delta, t/x}{\delta', t'/x}{\Delta, x : T \at l}}
  \end{mathpar}
\end{lemma}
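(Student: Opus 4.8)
The plan is to unfold the definition of $\lssubeq i {\delta, t/x}{\delta', t'/x}{\Delta, x : T \at l}$ and discharge it from the four premises. First I would collect the standing data: from $\lssubeq i{\delta}{\delta'}{\Delta}$ we already have $\lsgctx\Psi$, $\lsctxwf{\typeof i}\Gamma$ and $\lsctxwf{\typeof i}\Delta$; combining the last of these with $\lstypwf[\Psi][\Delta]{\typeof i}{T}{l}$ and $\typing[L]l\Level$, via the already-proven semantic context-extension rule, yields $\lsctxwf{\typeof i}{\Delta, x : T \at l}$, which is the context side condition of the goal. Then I would fix arbitrary $\tyequiv[L']{\phi}{\phi'}{L}$, $\ldSgsubeq[\Phi][L']{\sigma}{\sigma'}{\Psi[\phi]}$, a layer $k \ge i$, and a pair $\ldsubeq[\Phi][\Delta''][L']{k}{\typeof i}{\delta_1}{\delta_1'}{\Gamma[\phi][\sigma]}$, and must produce $\ldsubeq[\Phi][\Delta''][L']{k}{\typeof i}{(\delta, t/x)[\phi][\sigma] \circ \delta_1}{(\delta', t'/x)[\phi'][\sigma'] \circ \delta_1'}{(\Delta, x : T \at l)[\phi][\sigma]}$.

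Using the substitution algebra (\Cref{lem:dt:interact} and the composition laws for local, global and universe substitutions) I would compute $(\delta, t/x)[\phi][\sigma] \circ \delta_1 = (\delta[\phi][\sigma] \circ \delta_1),\, t[\phi][\sigma][\delta_1]/x$ and $(\Delta, x : T \at l)[\phi][\sigma] = \Delta[\phi][\sigma],\, x : T[\phi][\sigma] \at{l[\phi]}$, and likewise for the primed side. The goal is then a $\ldsubeq$ relation over an extended local context, so by the defining clause of the logical relation for extended contexts I must supply: a witnessing family for $\Delta[\phi][\sigma] \approx \Delta[\phi'][\sigma']$ at layer $k$, a tail relation saying that $\delta[\phi][\sigma] \circ \delta_1 \approx \delta'[\phi'][\sigma'] \circ \delta_1'$ lies in that family under every further weakening, and a head condition that $t[\phi][\sigma][\delta_1] \approx t'[\phi'][\sigma'][\delta_1']$ lies in the type-family $\D$ coming from $\lstypwf$. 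The context-family and the tail come by instantiating $\lssubeq i{\delta}{\delta'}{\Delta}$ at $\phi,\phi',\sigma,\sigma',k$ and the various $\delta_1$'s, closing under the additional weakenings with the Weakening lemmas for the context/substitution logical relation. The head comes from $\lstyequiv i{t}{t'}{T[\delta]}{l}$, after rewriting its type index $T[\delta][\phi][\sigma][\delta_1]$ as $T[\phi][\sigma][(\delta[\phi][\sigma] \circ \delta_1)]$ (again by \Cref{lem:dt:interact} and composition) so it matches the index governed by $\lstypwf$; I would also note $l[\phi] \approx l[\phi']$ from $\typing[L]l\Level$ to justify the universe-level annotations in the extended context.

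As in the earlier semantic lemmas I would then split on the layer. When $i = \metalevel$ (so $k = \metalevel$) or $i = \proglevel$ (so $k \in \{\proglevel,\metalevel\}$) the argument above runs directly. When $i \in \{\varlevel,\codelevel\}$ and $k \in \{\varlevel,\codelevel\}$, I would first invoke the local-substitution stability lemmas (\Cref{lem:dt:lsubst-cp}, \Cref{lem:dt:lsubst-ge}) to eliminate $\delta_1$ entirely, collapsing the obligation to the corresponding inductive cons rule of \Cref{sec:dt:semglob}, which then reduces, modulo converting weakenings to substitutions, to the $k = \proglevel$ case already handled --- exactly the pattern used for $\ze$, $\su$ and $\Nat$ above.

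The main obstacle will not be the substitution bookkeeping, which is routine, but aligning the \emph{derivation witnesses}. The extended-context clause of $\ldsubeq$ requires the head equivalence inside a specific derivation $\D(\alpha, \Cc(\alpha))$ assembled from the context-family and the weakened tail, whereas $\lstyequiv i{t}{t'}{T[\delta]}{l}$ returns an equivalence in whatever derivation it internally builds for the (rewritten) type; bridging the two requires the Right and Left Irrelevance lemmas --- both for the logical relation on types and terms and for the one on contexts and substitutions --- together with a determinacy argument ensuring the two competing weak-head reductions of $T[\phi][\sigma][\delta[\phi][\sigma] \circ \delta_1]$ agree. Threading these irrelevances through the universal quantifier over weakenings $\alpha$, and checking that the weakened tails $\Cc(\alpha)$ feed coherently into $\D$, is the delicate point of the proof.
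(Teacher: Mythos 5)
Your proof is correct and follows the same route as the paper: extract $\lsgctx\Psi$, $\lsctxwf{\typeof i}\Gamma$, $\lsctxwf{\typeof i}\Delta$ from the tail hypothesis, build $\lsctxwf{\typeof i}{\Delta, x : T \at l}$ by the semantic context-extension rule, expand the composition $(\delta, t/x)[\phi][\sigma] \circ \delta_1 = (\delta[\phi][\sigma]\circ\delta_1), (t[\phi][\sigma][\delta_1])/x$, and discharge the head and tail of the extended-context clause by instantiating the two semantic premises and closing the derivation-witness mismatch with irrelevance, splitting on layers exactly as in the earlier semantic lemmas. The only difference is that you spell out in more detail what ``and irrelevance'' does (aligning the $\D(\alpha,\Cc(\alpha))$ witness against the one $\lstyequiv$ internally constructs), which the paper leaves implicit.
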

\begin{proof}
  From $\lssubeq i {\delta}{\delta'}{\Delta}$ we have $\lsctxwf {\typeof i} \Gamma$,
  $\lsctxwf{\typeof i} \Delta$ and $\lsgctx \Psi$. %
  Then from a previous lemma, we further have
  $\lsctxwf{\typeof i}{\Delta, x : T \at l}$.

  Now we assume $\tyequiv[L']{\phi}{\phi'}L$,
  $\ldSgsubeq[\Phi][L'] \sigma{\sigma'}{\Psi[\phi]}$. %
  We should consider all possible $i$.
  \begin{itemize}
  \item[Case $i = \metalevel$] Then assuming
    $\ldsubeq[\Phi][\Delta'][L'] \metalevel \metalevel{\delta_1}{\delta_1'}{\Gamma[\phi][\sigma]}$, we must
    show
    \begin{itemize}
    \item $\D :: \ldctxeq[\Phi][L'] \metalevel \metalevel{(\Delta, x : T \at
        l)[\phi][\sigma]}{(\Delta, x : T \at l)[\phi'][\sigma']}$ and 
    \item
      $\ldsubeq[\Phi][\Delta'][L'] \metalevel \metalevel{(\delta, t/x)[\phi][\sigma] \circ
        \delta_1}{(\delta', t'/x)[\phi'][\sigma'] \circ \delta_1'}\D$.
    \end{itemize}
    We expand the composition:
    \begin{align*}
      (\delta, t/x)[\phi][\sigma] \circ \delta_1 &= (\delta[\phi][\sigma] \circ
                                                   \delta_1),
                                                   (t[\phi][\sigma][\delta_1])/x \\
      (\delta', t'/x)[\phi'][\sigma'] \circ \delta_1' &= (\delta'[\phi'][\sigma'] \circ
                                                        \delta_1'),
                                                        (t'[\phi'][\sigma'][\delta_1'])/x 
    \end{align*}
    
    We can conclude the goal by using $\lsctxwf{\typeof i}{\Delta, x : T \at l}$,
    $\lstyequiv i {t}{t'}{T[\delta]} l$ and irrelevance. %
    
  \item[Case $i = \proglevel$] Similar.
    
  \item[Case $i = \codelevel$] Similar to the previous pattern, we apply the local substitution
    lemma and use the previous case to discharge the obligations. 
  \item[Case $i = \varlevel$] Similar.
  \end{itemize}
\end{proof}

\begin{lemma}
  \begin{mathpar}
    \inferrule
    { }
    {\lsgctx\cdot}

    \inferrule
    {\lsgctx\Psi}
    {\lsgctx{\Psi, g : \Ctx}}

    \inferrule
    {\lsgctx\Psi \\ \lsctxwf \proglevel \Gamma \\\\ \typing[L]l\Level \\ i \in \{\codelevel, \proglevel\}}
    {\lsgctx{\Psi, U : \DTyp i l}}

    \inferrule
    {\lsgctx\Psi \\ \lstypwf \proglevel T l \\\\ \typing[L]l\Level \\ i \in \{\varlevel, \codelevel\}}
    {\lsgctx{\Psi, u : \DTrm i T l}}
  \end{mathpar}
\end{lemma}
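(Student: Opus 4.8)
The plan is to check each of the four inference rules in turn. In every case we unfold the definition of $\lsgctx{(-)}$: fix an arbitrary universe-substitution pair $\tyequiv[L']{\phi}{\phi'}L$ and produce a derivation of $\ldSgctxeq[L']{(-)[\phi]}{(-)[\phi']}$ by applying the matching constructor of the inductive--recursive definition from \Cref{sec:dt:semglob}. The empty case $\lsgctx\cdot$ is immediate, since after substitution the goal is $\ldSgctxeq[L']\cdot\cdot$, which is the constructor with no premises. For the three remaining cases there is a uniform piece of bookkeeping: the constructors of $\ldSgctxeq$ carry a premise of the shape $\E :: \forall\ \theta :: L'' \To L'\ .\ \ldSgctxeq[L'']{\Psi[\phi]}{\Psi[\phi']}$ (the two global contexts further weakened along $\theta$). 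From the hypothesis $\lsgctx\Psi$ instantiated at $\tyequiv[L']{\phi}{\phi'}L$ we get $\ldSgctxeq[L']{\Psi[\phi]}{\Psi[\phi']}$, and then the Weakening lemma for $\ldSgctxeq$ supplies $\E(\theta)$ for each $\theta$; equivalently one composes $\phi$ with $\theta$ using the algebra of universe substitutions and the universe-substitution lemmas. This already disposes of the $g : \Ctx$ case, since the constructor for $\ldSgctxeq{\Phi, g : \Ctx}{\Phi', g : \Ctx}$ has only that premise.

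For the $U : \DTyp i l$ case the remaining obligation, after building $\E$, is the premise $\ltSctxeq \E{\Gamma[\phi]}{\Gamma[\phi']}$. Unfolding it, one must show that for every $\theta$, every related pair $\sigma,\sigma'$ over $\E(\theta)$, and every $k \ge \proglevel$, we have $\ldctxeq[\Psi'][L''] k \proglevel{\Gamma[\phi][\theta][\sigma]}{\Gamma[\phi'][\theta][\sigma']}$. This is exactly an instance of the hypothesis $\lsctxwf \proglevel \Gamma$ (recall $\typeof\proglevel = \proglevel$), applied with the composite universe substitutions $\phi\circ\theta$ and $\phi'\circ\theta$ — whose equivalence follows from $\tyequiv[L']\phi{\phi'}L$ and the universe-substitution lemmas — and the algebraic identity $\Gamma[\phi][\theta] = \Gamma[\phi\circ\theta]$; the side condition $i \in \{\codelevel,\proglevel\}$ is discharged verbatim. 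Feeding $\E$ and this derivation to the $\ldSgctxeq$-constructor for $U : \DTyp i l$ closes the case.

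The $u : \DTrm i T l$ case is analogous but one step longer. After building $\E$ we obtain $\F :: \ltSctxeq \E{\Gamma[\phi]}{\Gamma[\phi']}$ from the $\lsctxwf \proglevel \Gamma$ component packaged inside the hypothesis $\lstypwf \proglevel T l$, exactly as in the previous case. It then remains to produce $\ltStypeq \F \proglevel \proglevel{T[\phi]}{T[\phi']}{l[\phi]}$; unfolding, this asks for $\ldtypeq[\Phi][\Delta][L''] k \proglevel{T[\phi][\theta][\sigma][\delta]}{T[\phi'][\theta][\sigma'][\delta']}{l[\phi][\theta]}$ under the obvious related data, which is literally the body of $\lstypwf \proglevel T l$ read after composing $\phi$ with $\theta$ (again using the substitution algebra to rearrange the iterated substitutions). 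The $\ldSgctxeq$-constructor for $u : \DTrm i T l$, applied to $\E$, $\F$, and this derivation, with $i \in \{\varlevel,\codelevel\}$ immediate, finishes the proof.

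I expect the only real difficulty to be managerial rather than conceptual: one must track the three nested layers of quantification — the outer $\phi,\phi'$ from $\lsgctx$, the $\theta$ built into the $\ldSgctxeq$-constructors, and the $\sigma,\sigma',\delta,\delta',k$ that surface when $\ltSctxeq$ and $\ltStypeq$ are unfolded — and line everything up with the algebra lemmas (\emph{Algebra of Universe Substitutions}, \emph{Acting on Identities}, and the universe-substitution lemmas) so that each semantic hypothesis applies on the nose. Notably, no case split on whether $i$ is a code or a program layer is needed here: the constructors of $\ldSgctxeq$ impose only the range condition on $i$, and the code/program distinction shows up only in the relation on global \emph{substitutions}, which is not part of this lemma; hence once the composition bookkeeping is set up, each case closes by a direct appeal to the corresponding hypothesis.
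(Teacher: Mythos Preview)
Your proposal is correct and follows the same approach as the paper, whose proof is the one-liner ``Immediate. We take advantage of the fact that a universe weakening is a special universe substitution.'' Your detailed unfolding of the constructors of $\ldSgctxeq$ and the observation that the inner $\theta$-quantification is discharged by composing $\phi$ with $\theta$ (viewing the weakening as a substitution) is exactly what that sentence abbreviates; your alternative via the Weakening lemma for $\ldSgctxeq$ is equivalent.
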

\begin{proof}
  Immediate. %
  We take advantage of the fact that a universe weakening is a special universe
  substitution. %
\end{proof}

\begin{lemma}
  \begin{mathpar}
    \inferrule
    {\lsctxwf{\typeof i}\Gamma \\ x : T \at l \in \Gamma}
    {\lstyequiv i x x T l}
  \end{mathpar}
\end{lemma}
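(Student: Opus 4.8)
The plan is to unfold the definition of $\lstyequiv i x x T l$ and, at every layer, reduce it to a single lookup in a pair of related local substitutions. From the hypothesis $\lsctxwf{\typeof i}\Gamma$ we immediately get $\lsgctx\Psi$; moreover, inverting the semantic well-formedness of $\Gamma$ at the position where $x : T \at l$ is bound yields $\lstypwf{\typeof i} T l$, which is one of the clauses the definition of $\lstyequiv$ demands (the $\ldctxeq$-derivation packaged in $\lsctxwf{\typeof i}\Gamma$ stores precisely a stability-under-substitutions predicate for the type of each context entry, which I repackage, after the obvious weakening, as $\lstypwf{\typeof i} T l$). It then remains to discharge the main clause: fixing $\tyequiv[L']{\phi}{\phi'}L$, $\ldSgsubeq[\Phi][L']\sigma{\sigma'}{\Psi[\phi]}$, $k \ge i$, and $\ldsubeq[\Phi][\Delta][L'] k{\typeof i}\delta{\delta'}{\Gamma[\phi][\sigma]}$, we must produce $\ldtyequivt[\Phi][\Delta][L'] k{\typeof i}{x[\phi][\sigma][\delta]}{x[\phi'][\sigma'][\delta']}{T[\phi][\sigma][\delta]}{l[\phi]}$. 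Since universe and global substitutions act as the identity on local variables and $x[\delta] = \delta(x)$, the goal is exactly that $\delta(x)$ and $\delta'(x)$ are related at $T[\phi][\sigma][\delta]$ in the logical relation at layer $k$.

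The heart of the argument is a projection lemma: whenever $\ldsubeq[\Phi][\Delta][L'] k{\typeof i}\delta{\delta'}{\Gamma'}$ and $x : S \at{l_0} \in \Gamma'$, one has $\ldtyequivt[\Phi][\Delta][L'] k{\typeof i}{\delta(x)}{\delta'(x)}{S[\delta]}{l_0}$. I would prove it by induction on the derivation $\E :: \ldctxeq k{\typeof i}\Gamma'{\Gamma''}$ used to type the related substitutions. The base cases $\cdot$ and $g$ are vacuous. In the step case $\Gamma' = \Gamma_0, y : S_0 \at{l_0}$ the definition of related local substitutions gives $\delta = \delta_1, t/x$, $\delta' = \delta_1', t'/x$, related truncations $\delta_1, \delta_1'$ (quantified over weakenings $\alpha$, which I instantiate at the identity), and $t, t'$ related at the stability predicate $\D$ for $S_0$. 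If $x = y$, these data are precisely the goal after a use of right irrelevance to swap in the type derivation at hand; if $x$ lies in $\Gamma_0$, I apply the IH to $\delta_1, \delta_1'$, then weaken the resulting related terms along the one-variable weakening $\wk^1_{\Gamma_0}$ and use that the annotated type $S$ does not mention $y$, so $S[\delta]$ agrees with the weakened $S[\delta_1]$; the logical relation is closed under weakening, so this goes through. At layers $k \in \{\varlevel, \codelevel\}$ the relation instead carries the syntactic-tracking judgments of \Cref{sec:dt:semglob} and the equality $\delta = \delta'$, and the same induction applies with the weakening and local-substitution closure lemmas (\Cref{lem:dt:lsubst-cp}) for $\ldtyping$ and $\ldsubst$ in place of the Kripke-relation lemmas.

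With the projection lemma in hand, the layers are handled in the ``backwards'' order of the surrounding lemmas. For $i = \metalevel$ (so $\typeof i = \metalevel$) and for $i = \proglevel$ with arbitrary $k \ge \proglevel$ (so $\typeof i = \proglevel$), applying the projection lemma to the given $\ldsubeq$ hypothesis immediately yields the required related-terms judgment, modulo right irrelevance to align the extracted type derivation with the one coming from $\lstypwf{\typeof i} T l$. For $i \in \{\varlevel, \codelevel\}$ the only genuinely new subcase is $k = \codelevel$: there I use the local-variable rule of the tracking judgment to get $\ldtyping \codelevel\proglevel x {T[\phi][\sigma]}{l[\phi]}$ in the ambient context $\Gamma[\phi][\sigma]$ — whose semantic side condition, a reducibility statement $\ldStypingge\proglevel\proglevel x T l$ for the variable, is again the projection lemma at layers $\ge \proglevel$ — and then push the local substitution $\delta$ through by \Cref{lem:dt:lsubst-cp}, converting weakenings to identity substitutions as needed; the $k = \varlevel$ and the already-covered $k \ge \proglevel$ subcases require nothing new.

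I expect the main obstacle to be the peel-off step of the projection lemma when $x$ is not the last variable of the context: threading the one-variable weakening through the related terms, the type annotations, and the stability predicate $\D$, and then checking (via right irrelevance) that the type derivation one ends up with matches $T[\delta]$, is where the bookkeeping is most delicate — the more so at layers $\varlevel, \codelevel$, where one must simultaneously maintain the syntactic equality $\delta = \delta'$ and a derivation of the syntactic-tracking judgment $\ldtyping$ for the looked-up term, rather than just the bare logical relation.
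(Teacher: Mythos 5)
Your approach is essentially the paper's: unfold the semantic judgment, construct the type part by inverting $\lsctxwf{\typeof i}\Gamma$ at the position of $x$, and then prove the term part by a variable-projection argument on related local substitutions (which the paper phrases as ``induction on $x : T \at l \in \Gamma$''), splitting on $i$ and, for $i \in \{\varlevel, \codelevel\}$, building the tracking judgment $\ldtyping \codelevel \proglevel x T l$ from the $\proglevel$-layer facts and closing under local substitution via \Cref{lem:dt:lsubst-cp}. One small bookkeeping slip in your projection lemma: in the peel-off step where $x$ lies in $\Gamma_0$, no weakening of the related terms is required --- since $\delta = \delta_1, t/y$ with $x \neq y$, one has $\delta(x) = \delta_1(x)$ directly, and the only thing to verify is the type equation $S[\delta] = S'[\delta_1]$ (where $S = S'[\wk^1]$ is the one-step-weakened type stored at $x$'s position in $\Gamma'$), which follows from $\wk^1 \circ \delta = \delta_1$; the IH then gives the result directly modulo irrelevance, and the appeal to closure of the logical relation under weakening is not needed there.
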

\begin{proof}
  From $\lsctxwf {\typeof i} \Gamma$, we also know $\lsgctx \Psi$. %

  To construct the semantic judgment for type $T$, we first assuming
  $\tyequiv[L']{\phi}{\phi'}L$, $\ldSgsubeq[\Phi][L'] \sigma{\sigma'}{\Psi[\phi]}$ and
  some $k \ge i$. %
  We have
  \[
    \ldSctxeq[\Phi][L'] k{\typeof i} {\Gamma[\phi][\sigma]}{\Gamma[\phi'][\sigma']}
  \]
  Our goal is to construct
  \[
    \ldtypeq[\Phi][\Delta][L'] k{\typeof
      i}{T[\phi][\sigma][\delta]}{T'[\phi'][\sigma'][\delta']}{l[\phi]}
  \]
  with further assuming
  $\ldsubeq[\Phi][\Delta][L'] k{\typeof i}\delta{\delta'}{\Gamma[\phi][\sigma]}$. %
  This is done by doing induction on $x : T \at l \in \Gamma$. %

  Then we consider the term. %
  Since it is the variable case, $i$ can take all four layers.
  \begin{itemize}
  \item[Case $i = \metalevel$] Then assuming
    $\ldsubeq[\Phi][\Delta][L'] \metalevel \metalevel \delta{\delta'}{\Gamma[\phi][\sigma]}$, we must
    show
    \[
      \D :: \ldtypeq[\Phi][\Delta][L'] \metalevel \metalevel{T[\phi][\sigma][\delta]}{T'[\phi'][\sigma'][\delta']}{l[\phi]}
      \tand
      \ldtyequiv[\Phi][\Delta][L'] \metalevel \metalevel{\delta(x)}{\delta'(x)}\D
    \]
    We proceed by doing induction on $x : T \at l \in \Gamma$. %
    We weaken the universe and global contexts to obtain the goal.
    
  \item[Case $i = \proglevel$] This case works similarly at different layers. %
    We omit it here.
  \item[Case $i = \codelevel$] In this case, we consider $k = \codelevel$ and apply the local
    substitution lemma.  %
    Based on the rule in \Cref{sec:dt:semglob} and the previous case, we have the
    goal.
  \item[Case $i = \varlevel$] This case makes use of the entire previous case and also in
    addition must prove the same for $k = \varlevel$. %
    But this is virtually identical to the previous case.
  \end{itemize}
\end{proof}

Combining the semantic rules for local substitutions, we derive that
\begin{corollary}[Local Weakening Substitutions]\labeledit{lem:dt:sem:lwk}
  $\lssubst[\Psi][\Gamma,\Delta] i{\wk_{\Gamma}^{|\Delta|}}\Gamma$
\end{corollary}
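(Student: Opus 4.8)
The plan is to prove the corollary by induction on the codomain context $\Gamma$, mirroring the recursive definition of $\wk^{k}_{\Gamma}$ and assembling the semantic rules for local substitutions that were just established (the rules for $\cdot^{k'}$, $\cdot^{k'}_{g}$, $\wk^{k'}_{g}$, and $\delta, t/x$). Throughout I would take as standing data the semantic well-formedness of the ambient local context $\Gamma,\Delta$, from which, by unfolding the definition of semantic context well-formedness, one extracts $\lsgctx\Psi$, $\lsctxwf{\typeof i}{\Gamma_0}$ for every prefix $\Gamma_0$ of $\Gamma$, and $\lstypwf{\typeof i}{T}{l}$ together with $\typing[L]l\Level$ for each binding $x : T \at l$ occurring in $\Gamma$.

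The two base cases are immediate. If $\Gamma = \cdot$, then $\wk^{|\Delta|}_{\cdot} = \cdot^{|\Delta|}$ and the context $\cdot,\Delta$ ends with $\cdot$ and has length $|\Delta|$, so the goal is a direct instance of the semantic rule for $\cdot^{k'}$. If $\Gamma = g$, then $\wk^{|\Delta|}_{g} = \wk^{|\Delta|}_{g}$, the context $g,\Delta$ ends with $g$, has length $|\Delta|$, and $g : \Ctx \in \Psi$, so the goal is a direct instance of the semantic rule for $\wk^{k'}_{g}$. For the step case $\Gamma = \Gamma_0, x : T \at l$, the definition gives $\wk^{|\Delta|}_{\Gamma} = \wk^{1+|\Delta|}_{\Gamma_0}, x/x$. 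I would apply the induction hypothesis to $\Gamma_0$ with the enlarged suffix $x : T \at l, \Delta$ (of length $1+|\Delta|$), obtaining $\lssubst[\Psi][\Gamma_0, x : T \at l, \Delta] i{\wk^{1+|\Delta|}_{\Gamma_0}}{\Gamma_0}$, and then close with the semantic rule for $\delta, t/x$; the remaining premises $\typing[L]l\Level$ and $\lstypwf[\Psi][\Gamma_0]{\typeof i}{T}{l}$ come from the standing well-formedness data, and $\lstyequiv i x x {T[\wk^{1+|\Delta|}_{\Gamma_0}]} l$ comes from the semantic variable rule applied to $x : T[\wk^{1+|\Delta|}_{\Gamma_0}] \at l \in \Gamma_0, x : T \at l, \Delta$.

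The main friction I expect is the weakening bookkeeping in the step case: one must check that the declared type of $x$ inside $\Gamma_0, x : T \at l, \Delta$ is literally $T[\wk^{1+|\Delta|}_{\Gamma_0}]$, so that it agrees with $T[\delta]$ for $\delta = \wk^{1+|\Delta|}_{\Gamma_0}$, and one must keep the layer indices aligned — the context and type rules are indexed by $\typeof i$ while the substitution rule is indexed by $i$, and $i$ ranges over all four layers, including $\varlevel$, where the semantic variable rule is already known to hold. Neither point is substantive: the weakening identity is an instance of the weakening algebra the paper takes for granted, and $\ttypeof$-monotonicity together with the lifting lemmas absorb the layer mismatch. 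No new logical-relation reasoning is required; the corollary is purely an assembly of the preceding semantic rules.
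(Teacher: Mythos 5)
Your proposal is correct and matches what the paper intends: the paper states this corollary without explicit proof, introducing it only with the phrase ``Combining the semantic rules for local substitutions,'' which is precisely the induction on $\Gamma$ you carry out — base cases via the rules for $\cdot^{k'}$ and $\wk^{k'}_g$, step case via the IH on the shorter prefix with the enlarged suffix, the $(\delta, t/x)$ rule, and the semantic variable rule. Your remarks about the weakening bookkeeping and the $\ttypeof$/layer indexing are accurate and correctly identified as routine.
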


\begin{lemma}
  \begin{mathpar}
    \inferrule
    {\lsctxwf{\typeof i}\Gamma \\ u : \DTrm[\Delta]{i'}T l \in \Psi \\ i' \in \{\varlevel, \codelevel\} \\ i \in
      \{\varlevel, \codelevel, \proglevel, \metalevel\} \\ i' \le i \\ \Ac :: \lssubeq i \delta{\delta'} \Delta}
    {\lstyequiv i{u^\delta}{u^{\delta'}}{T[\delta]}{l}}
  \end{mathpar}
\end{lemma}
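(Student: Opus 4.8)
The plan is to treat this as one of the inductive cases of the Fundamental Theorem, unfolding the definition of $\lstyequiv i{u^\delta}{u^{\delta'}}{T[\delta]}{l}$ and reducing it to the semantic information about $u$ that is already packaged inside the semantic global-context judgment $\lsgctx\Psi$ and the semantic global-substitution relation $\ldSgsubeq[\Phi][L']\sigma{\sigma'}{\Psi[\phi]}$. First I would discharge the three standing side-conditions of the semantic judgment: $\lsgctx\Psi$ and $\lsctxwf{\typeof i}\Gamma$ come straight from the premise $\lsctxwf{\typeof i}\Gamma$ (the former being part of its unfolding), while for $\lstypwf{\typeof i}{T[\delta]}{l}$ I would look $u$ up in $\lsgctx\Psi$ to obtain that $T$, living in the local context $\Delta$, is semantically well-formed at layer $\proglevel$ and stable under all substitutions, take the diagonal $\lssubst i\delta\Delta$ of $\Ac$ (the semantic judgments being PERs), compose the two with the semantic local-substitution lemma at the appropriate layer, and, when $i = \metalevel$, first move $T$ from type-layer $\proglevel$ to type-layer $\metalevel$ by the layering-restriction lemma.

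For the main body I would fix $\tyequiv[L']\phi{\phi'}L$, a related pair $\ldSgsubeq[\Phi][L']\sigma{\sigma'}{\Psi[\phi]}$, a layer $k \ge i$, and a related pair of inner local substitutions $\ldsubeq[\Phi][\Delta_1][L']k{\typeof i}{\delta_1}{\delta_1'}{\Gamma[\phi][\sigma]}$, and must produce $\ldtyequivt[\Phi][\Delta_1][L']k{\typeof i}{u^\delta[\phi][\sigma][\delta_1]}{u^{\delta'}[\phi'][\sigma'][\delta_1']}{T[\delta][\phi][\sigma][\delta_1]}{l[\phi]}$. A routine computation using the definitions of the three substitution actions together with \Cref{lem:dt:interact} and the algebra of local substitutions rewrites the two terms as $\sigma(u)[(\delta[\phi][\sigma])\circ\delta_1]$ and $\sigma'(u)[(\delta'[\phi'][\sigma'])\circ\delta_1']$, and the target type as $T[\phi][\sigma][(\delta[\phi][\sigma])\circ\delta_1]$. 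Next, since $u : \DTrm[\Delta]{i'}T l \in \Psi$ with $i' \in \{\varlevel,\codelevel\}$, looking $u$ up in $\ldSgsubeq[\Phi][L']\sigma{\sigma'}{\Psi[\phi]}$ yields $\sigma(u) = \sigma'(u) =: t_u$ — the equality being forced by $i' \in \{\varlevel,\codelevel\}$, cf. \Cref{lem:dt:static} — together with the inductive judgment $\ldtyping[\Phi][\Delta[\phi][\sigma]]{i'}\proglevel{t_u}{T[\phi][\sigma]}{l[\phi]}$ recording the syntactic and semantic content of $t_u$; by \Cref{lem:dt:ldc-pm} this supplies the stability of $t_u$ under weakenings and related local substitutions. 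Finally, applying $\Ac$ to $\phi,\sigma,k,\delta_1,\delta_1'$ yields the related composite $\ldsubeq[\Phi][\Delta_1][L']k{\typeof i}{(\delta[\phi][\sigma])\circ\delta_1}{(\delta'[\phi'][\sigma'])\circ\delta_1'}{\Delta[\phi][\sigma]}$.

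It then remains to feed this composite substitution into the stability predicate for $t_u$ and to reconcile the layers, so I would split on $k$. When $k \in \{\proglevel,\metalevel\}$ the goal is the genuine logical relation: if $\typeof i = \proglevel$ the composite already carries type-layer $\proglevel$ and the stability predicate for $t_u$ applies directly; if $\typeof i = \metalevel$ (which forces $i = k = \metalevel$) I first bring the composite down to type-layer $\proglevel$ by the layering-restriction lemma for local substitutions, apply the stability predicate, and lift the resulting term relation back up to type-layer $\metalevel$ by the layering-restriction lemma for terms; in both cases the computation above identifies the output with the goal. When $k \in \{\varlevel,\codelevel\}$ (which forces $i, i' \le k$ and $\typeof i = \proglevel$), the target is, by definition of the logical relation at layers below $\proglevel$, a judgment of the same inductive form as $\ldtyping[\Phi][\Delta_1][L']k\proglevel{t_u[\cdots]}{T[\cdots]}{l[\phi]}$ together with a syntactic equality, so I would instead push $\ldtyping[\Phi][\Delta[\phi][\sigma]]{i'}\proglevel{t_u}{T[\phi][\sigma]}{l[\phi]}$ along $\delta[\phi][\sigma]$ by \Cref{lem:dt:lsubst-cp} and the lifting lemma for these judgments — absorbing $\delta_1$ via the local-substitution lemma, as the text recommends when $k \in \{\varlevel,\codelevel\}$ — and finish with a conversion adjusting the type annotation. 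The main obstacle I anticipate is exactly this bookkeeping of the two layer indices $k$ and $\typeof i$: the semantic content stored for $u$ lives at layer $\proglevel$, and it must be transported — via the layering-restriction lemmas, \Cref{lem:dt:ldc-pm}, \Cref{lem:dt:lsubst-cp} and the lifting lemma, applied in the right order — to whichever layer the ambient derivation demands.
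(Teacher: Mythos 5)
Your proposal is correct and follows essentially the same route as the paper's proof: look up the semantic invariant stored for $u$ in $\ldSgsubeq$, instantiate $\Ac$ to obtain the related composite local substitutions targeting $\Delta[\phi][\sigma]$, and then bridge the layer mismatch — the stored data lives at $j=\proglevel$ — via the layering-restriction lemmas (down on substitutions, up on the resulting term relation) when $\typeof i = \metalevel$, and via the inductive $\ldtyping$ judgment plus local-substitution/lifting lemmas when $k\in\{\varlevel,\codelevel\}$. The paper organizes the case analysis by $i$ while you case on $k$ after fixing the quantifiers, and you invoke \Cref{lem:dt:static} to motivate $\sigma(u)=\sigma'(u)$ whereas the paper simply reads that equality off the definition of $\ldSgsubeq$, but these are differences of presentation, not of substance.
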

\begin{proof}
  From $\lsctxwf {\typeof i} \Gamma$, we also know $\lsgctx \Psi$. %

  To construct the semantic judgment for type $T$, we first assuming
  $\tyequiv[L']{\phi}{\phi'}L$, $\ldSgsubeq[\Phi][L'] \sigma{\sigma'}{\Psi[\phi]}$ and
  some $k \ge i$, and then
  $\ldsubeq[\Phi][\Delta'][L'] k{\typeof i}{\delta_1}{\delta_1'}{\Gamma[\phi][\sigma]}$. %
  We have
  \[
    \ldSctxeq[\Phi][L'] k{\typeof i} {\Gamma[\phi][\sigma]}{\Gamma[\phi'][\sigma']}
  \]
  Our goal is to construct
  \[
    \ldtypeq[\Phi][\Delta'][L'] k{\typeof
      i}{T[\phi][\sigma][\delta[\phi][\sigma] \circ \delta_1]}{T[\phi'][\sigma'][\delta'[\phi'][\sigma'] \circ \delta_1']}{l[\phi]}
  \]
  We obtain this by looking up $\ldSgctx[L']{\Psi[\phi]}$ using
  $u : \DTrm[\Delta]{i'}T l \in \Psi$, and use the invariant that $T$ is stable under
  global and local substitutions. 
  
  Then we consider the term. %
  \begin{itemize}
  \item[Case $i = \metalevel$] Then assuming
    $\ldsubeq[\Phi][\Delta'][L'] \metalevel \metalevel{\delta_1}{\delta_1'}{\Gamma[\phi][\sigma]}$, we must show
    \[
      \ldtyequivt[\Phi][\Delta'][L'] \metalevel \metalevel{\sigma(u)[\delta[\phi][\sigma]\circ
        \delta_1]}{\sigma'(u)[\delta'[\phi'][\sigma']\circ
        \delta_1']}{T[\phi][\sigma][\delta[\phi][\sigma] \circ \delta_1]}{l[\phi]}
    \]
    Looking up $\ldSgsubeq[\Phi][L'] \sigma{\sigma'}{\Psi[\phi]}$, we know that
    \[
      \Bc :: \ldStyequivge[\Phi][\Delta[\phi][\sigma]][L']{i'} \proglevel{\sigma(u)}{\sigma'(u)}{T[\phi][\sigma]} l
    \]
    and
    \[
      \ldctxwf[\Phi][L'] \proglevel \proglevel{\Delta[\phi][\sigma]}
    \]
    We also know the following from $\Ac$
    \[
      \ldsubeq[\Phi][\Delta'][L'] \metalevel \metalevel{\delta[\phi][\sigma] \circ
        \delta_1}{\delta'[\phi'][\sigma'] \circ \delta_1'}{\Delta[\phi][\sigma]}
    \]
    Therefore, we can apply layering restriction and obtain
    \[
      \ldsubeq[\Phi][\Delta'][L'] \metalevel \proglevel{\delta[\phi][\sigma] \circ
        \delta_1}{\delta'[\phi'][\sigma'] \circ \delta_1'}{\Delta[\phi][\sigma]}
    \]
    This is because we are sure $\Delta[\phi][\sigma]$ only contains types from MLTT.
    
    We want to lift $\sigma(u)$ and $\sigma'(u)$ to $\metalevel$, so we instantiate $\Bc$ with
    the related local substitutions above and obtain
    \[
      \ldtyequivt[\Phi][\Delta'][L'] \metalevel \proglevel{\sigma(u)[\delta[\phi][\sigma]\circ
        \delta_1]}{\sigma'(u)[\delta'[\phi'][\sigma']\circ
        \delta_1']}{T[\phi][\sigma][\delta[\phi][\sigma] \circ \delta_1]}{l[\phi]}
    \]
    The goal is achieved by another layering restriction. 
  \item[Case $i = \proglevel$] Similar to the previous case except that there is no need for
    layering restriction, as we are evaluating terms right inside of MLTT and
    therefore no lifting occurs. 
  \item[Case $i \in \{\varlevel, \codelevel\}$] In this case, we also follow similar footsteps as the
    results of looking up global substitutions must be stable under local
    substitutions. %
  \end{itemize}
\end{proof}

\begin{lemma}
  \begin{mathpar}
    \inferrule
    {\lsctxwf{\typeof i}\Gamma \\ U : \DTyp[\Delta]{i'} l \in \Psi \\ i' \in \{\codelevel, \proglevel\} \\ i' \le i \\ \lssubeq i \delta{\delta'} \Delta}
    {\lstypeq i{U^\delta}{U^{\delta'}}{l}}
  \end{mathpar}
\end{lemma}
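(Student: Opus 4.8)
The plan is to mirror the proof of the immediately preceding lemma for global variables as terms, replacing the term $\sigma(u)$ by the type $\sigma(U)$. From $\lsctxwf{\typeof i}\Gamma$ we immediately obtain $\lsgctx\Psi$, and $\lsctxwf{\typeof i}\Gamma$ is given, so the first two components of $\lstypeq i{U^\delta}{U^{\delta'}}{l}$ are in place. It then remains, for arbitrary $\tyequiv[L']{\phi}{\phi'}L$, $\ldSgsubeq[\Phi][L']\sigma{\sigma'}{\Psi[\phi]}$, $k\ge i$ and $\ldsubeq[\Phi][\Delta'][L']k{\typeof i}{\delta_1}{\delta_1'}{\Gamma[\phi][\sigma]}$, to establish
\[
  \ldtypeq[\Phi][\Delta'][L']k{\typeof i}{(U^\delta)[\phi][\sigma][\delta_1]}{(U^{\delta'})[\phi'][\sigma'][\delta_1']}{l[\phi]}.
\]
Unfolding the substitution operations (using $U^{\delta''}[\phi] = U^{\delta''[\phi]}$, $U^{\delta''}[\sigma] = \sigma(U)[\delta''[\sigma]]$, and the algebra of local substitutions), the two sides are $\sigma(U)[\delta[\phi][\sigma]\circ\delta_1]$ and $\sigma'(U)[\delta'[\phi'][\sigma']\circ\delta_1']$.

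Since $i'\in\{\codelevel,\proglevel\}$ and $i'\le i$ we have $i\in\{\codelevel,\proglevel,\metalevel\}$. By induction on the membership $U:\DTyp[\Delta]{i'}l\in\Psi$ (weakening over the bindings above $U$, exactly as in the term case), unfolding $\ldSgsubeq[\Phi][L']\sigma{\sigma'}{\Psi[\phi]}$ at this binding yields: when $i'=\codelevel$, the equality $\sigma(U)=\sigma'(U)$ together with $\ldtypwf[\Phi][\Delta[\phi][\sigma]][L']\codelevel\proglevel{\sigma(U)}{l[\phi]}$, which by \Cref{lem:dt:ldc-pm} also gives $\ldStypwfge[\Phi][\Delta[\phi][\sigma]][L']\proglevel\proglevel{\sigma(U)}{l[\phi]}$; when $i'=\proglevel$, directly $\ldStypeqge[\Phi][\Delta[\phi][\sigma]][L']\proglevel\proglevel{\sigma(U)}{\sigma'(U)}{l[\phi]}$. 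In either case we obtain $\ldStypeqge[\Phi][\Delta[\phi][\sigma]][L']\proglevel\proglevel{\sigma(U)}{\sigma'(U)}{l[\phi]}$, and in the $i'=\codelevel$ case additionally the syntax-tracking derivation $\ldtypwf[\Phi][\Delta[\phi][\sigma]][L']\codelevel\proglevel{\sigma(U)}{l[\phi]}$ and the equality $\sigma(U)=\sigma'(U)$. Meanwhile, applying the hypothesis $\lssubeq i\delta{\delta'}\Delta$ to the data $\phi\approx\phi'$, $\sigma\approx\sigma'$, $k$ and $\ldsubeq[\Phi][\Delta'][L']k{\typeof i}{\delta_1}{\delta_1'}{\Gamma[\phi][\sigma]}$ yields
\[
  \ldsubeq[\Phi][\Delta'][L']k{\typeof i}{\delta[\phi][\sigma]\circ\delta_1}{\delta'[\phi'][\sigma']\circ\delta_1'}{\Delta[\phi][\sigma]}.
\]

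It remains to combine these by case analysis on $i$. If $i=\proglevel$ then $\typeof i=\proglevel$ and $k\ge\proglevel$; feeding the composed local substitution just obtained into $\ldStypeqge[\Phi][\Delta[\phi][\sigma]][L']\proglevel\proglevel{\sigma(U)}{\sigma'(U)}{l[\phi]}$ produces the goal. If $i=\codelevel$ then necessarily $i'=\codelevel$ (as $i'\le\codelevel$) and $\typeof i=\proglevel$; for $k\in\{\proglevel,\metalevel\}$ we argue as in the $i=\proglevel$ case, while for $k=\codelevel$ the composed local substitution satisfies $\ldsubst[\Phi][\Delta'][L']\codelevel\proglevel{(\delta[\phi][\sigma]\circ\delta_1)}{\Delta[\phi][\sigma]}$, so \Cref{lem:dt:lsubst-cp} applied to the syntax-tracking derivation of $\sigma(U)$ gives $\ldtypwf[\Phi][\Delta'][L']\codelevel\proglevel{\sigma(U)[\delta[\phi][\sigma]\circ\delta_1]}{l[\phi]}$, and the required syntactic equality of the two sides follows from $\sigma(U)=\sigma'(U)$ together with the syntactic equality of the two composed local substitutions (which holds because at layer $\codelevel$ the relation between local substitutions is definitional equality). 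If $i=\metalevel$ then $k=\typeof i=\metalevel$, and layering restriction enters twice: first it brings the composed local substitution from the $\metalevel$--$\metalevel$ relation down to the $\metalevel$--$\proglevel$ relation (legitimate since $\Delta[\phi][\sigma]$ is a context of MLTT code), so that the relation $\ldStypeqge[\Phi][\Delta[\phi][\sigma]][L']\proglevel\proglevel{\sigma(U)}{\sigma'(U)}{l[\phi]}$ applies and yields $\ldtypeq[\Phi][\Delta'][L']\metalevel\proglevel{\sigma(U)[\delta[\phi][\sigma]\circ\delta_1]}{\sigma'(U)[\delta'[\phi'][\sigma']\circ\delta_1']}{l[\phi]}$; then a second, type-level use of layering restriction (legitimate since $\sigma(U)$ is MLTT code of a type) lifts this to the $\metalevel$--$\metalevel$ relation, which is the goal. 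I expect the main obstacle to be exactly this layer bookkeeping in the $i=\metalevel$ and $i=\codelevel$ cases: one must keep straight in which relation the available information lives --- the $k$--$\proglevel$ logical relation, the $k$--$\metalevel$ logical relation, or the $\codelevel$--$\proglevel$ syntax-tracking judgment --- and shuttle between them via layering restriction and the local-substitution closure lemmas; the rest is routine rewriting with the substitution algebra, identical in spirit to the term case.
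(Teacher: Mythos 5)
Your proposal is correct and follows exactly the approach the paper intends: the paper's own proof is the one-sentence remark ``Similar to above, but simpler. Use layering restriction as well when $i = \metalevel$,'' deferring entirely to the proof of the preceding $u^\delta$ lemma, and your proposal is a faithful transcription of that proof to the type case, including the two uses of layering restriction at $i=\metalevel$ (first on the composed local substitution, then on the resulting type relation) and the observation that $i=\codelevel$ forces $i'=\codelevel$. The layer bookkeeping you spell out in the $i\in\{\codelevel,\metalevel\}$ cases is precisely what the paper's terse pointer compresses.
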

\begin{proof}
  Similar to above, but simpler. %
  Use layering restriction as well when $i = \metalevel$.
\end{proof}

\begin{lemma}
  \begin{mathpar}
      \inferrule*
      {\lsgctx \Psi}
      {\lsgsubeq{\cdot}{\cdot}{\cdot}}

      \inferrule*
      {\lsgsubeq{\sigma}{\sigma'}{\Phi} \\ \lsctxeq \proglevel \Gamma\Delta}
      {\lsgsubeq{\sigma, \Gamma/g}{\sigma', \Delta/g}{\Phi, g: \Ctx}}
    \end{mathpar}
\end{lemma}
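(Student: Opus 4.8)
The plan is to unfold the definition of $\lsgsubeq{\cdot}{\cdot}{\cdot}$, respectively $\lsgsubeq{\sigma,\Gamma/g}{\sigma',\Delta/g}{\Phi,g:\Ctx}$, and to discharge each of the three components it packages (semantic well-formedness of the domain and codomain global contexts, plus the quantified statement on related global substitutions) using the escape, weakening, and right irrelevance lemmas for $\ldSgsubeq$, $\ldSgctxeq$ and the context-equivalence logical relation from \Cref{sec:dt:semglob,sec:dt:logrel-prop}, together with the algebra of substitutions from \Cref{sec:dt:syn}.

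For the base rule, $\lsgctx\Psi$ is supplied directly, and $\lsgctx{\cdot}$ holds trivially since $\cdot[\phi]=\cdot$ and $\ldSgctxeq[L']{\cdot}{\cdot}$ is the base constructor. For the quantified part, fixing $\tyequiv[L']{\phi}{\phi'}L$ and $\ldSgsubeq[\Psi'][L']{\sigma_1}{\sigma_1'}{\Psi[\phi]}$, I would use $\cdot[\phi]=\cdot$ and the composition law $\cdot\circ\sigma_1=\cdot$ for global substitutions to reduce the goal $\ldSgsubeq[\Psi'][L']{\cdot}{\cdot}{\cdot}$ to $\judge[L']{\Psi'}$, which follows from escape and presupposition applied to the given $\ldSgsubeq$.

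For the $g:\Ctx$ rule, $\lsgsubeq{\sigma}{\sigma'}{\Phi}$ already yields $\lsgctx\Psi$ and $\lsgctx\Phi$, and from $\lsgctx\Phi$ I would build $\lsgctx{\Phi,g:\Ctx}$: given $\tyequiv[L']{\phi}{\phi'}L$, the context-variable constructor of $\ldSgctxeq$ asks exactly for the family $\forall\,\theta :: L''\To L'.~\ldSgctxeq[L'']{\Phi[\phi][\theta]}{\Phi[\phi'][\theta]}$, which comes from instantiating $\lsgctx\Phi$ at the composite substitutions $\phi\circ\theta,\phi'\circ\theta$ and rewriting $\Phi[\phi][\theta]=\Phi[\phi\circ\theta]$ by the algebra of universe substitutions. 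The core obligation is then: for $\tyequiv[L']{\phi}{\phi'}L$ and $\ldSgsubeq[\Psi'][L']{\sigma_1}{\sigma_1'}{\Psi[\phi]}$, to establish $\ldSgsubeq[\Psi'][L']{(\sigma,\Gamma/g)[\phi]\circ\sigma_1}{(\sigma',\Delta/g)[\phi']\circ\sigma_1'}{(\Phi,g:\Ctx)[\phi]}$. Computing the substitutions, $(\sigma,\Gamma/g)[\phi]\circ\sigma_1 = (\sigma[\phi]\circ\sigma_1),\,\Gamma[\phi][\sigma_1]/g$ and $(\Phi,g:\Ctx)[\phi] = \Phi[\phi],g:\Ctx$, so by the definitional unfolding of $\ldSgsubeq$ at a context-variable extension I must supply (i) the head data $\ldSgsubeq[\Psi'][L']{\sigma[\phi]\circ\sigma_1}{\sigma'[\phi']\circ\sigma_1'}{\Phi[\phi]}$ closed under all $\theta::L''\To L'$, obtained by instantiating $\lsgsubeq{\sigma}{\sigma'}{\Phi}$ at $\phi,\phi',\sigma_1,\sigma_1'$ and then applying the weakening lemma for $\ldSgsubeq$ (realigning the accompanying derivation via right irrelevance), and (ii) $\ldSctxeqge[\Psi'][L'] \proglevel \proglevel {\Gamma[\phi][\sigma_1]}{\Delta[\phi'][\sigma_1']}$, obtained by instantiating $\lsctxeq\proglevel\Gamma\Delta$ at $\phi,\phi',\sigma_1,\sigma_1'$ and every $k\ge\proglevel$, and then closing under the global/universe weakenings $\alpha$ demanded by $\ldSctxeqge$ with the weakening lemma for the context-equivalence relation.

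The main obstacle I expect is bookkeeping rather than anything conceptual: keeping the three tiers of action (universe substitution $\phi$, global substitution $\sigma_1$, and the extra weakenings $\theta$ and $\alpha$) straight, checking that the definitional equalities for $(-)[\phi]$ and $(-)\circ(-)$ on global substitutions produce exactly the head/context split the $\ldSgsubeq$ rule for $\Phi,g:\Ctx$ expects, and threading right irrelevance so that the logical-relation derivations coming out of $\lsgsubeq{\sigma}{\sigma'}{\Phi}$ and $\lsctxeq\proglevel\Gamma\Delta$ can be plugged into the derivation named in the goal. I would lean on the commutativity and composition lemmas of \Cref{sec:dt:syn} throughout to avoid ``$\phi$ then $\theta$'' versus ``$\phi\circ\theta$'' mismatches.
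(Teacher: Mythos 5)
Your proof is correct and is exactly what the paper's one-word justification (``Immediate'') unpacks to: unfold the definition of $\lsgsubeq{}{}{}$, rebuild the semantic well-formedness components ($\lsgctx\cdot$ for the base, $\lsgctx{\Phi,g:\Ctx}$ for the step), compute the action of $[\phi]$ and $\circ$ on global substitutions to match the shape expected by the inductive step of $\ldSgsubeq$, and discharge the head/context split by instantiating the given semantic premises, closing under the extra weakenings with the weakening lemma and realigning derivations with irrelevance. The attention you pay to right irrelevance (to reconcile the derivation produced by weakening with the one demanded by $\E(\theta)$) is exactly the bookkeeping the paper elides; no gap.
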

\begin{proof}
  Immediate.
\end{proof}

\begin{lemma}
  \begin{mathpar}
    \inferrule*
    {\lsgsubeq{\sigma}{\sigma'}{\Phi} \\ \lstypwf[\Phi] \proglevel T l \\ \typing[L]l\Level \\
      i \in \{\varlevel, \codelevel\} \\ \Ac :: \lstyequiv[\Psi][\Gamma[\sigma]] i t{t'}{T[\sigma]}l}
    {\lsgsubeq[\Psi]{\sigma, t/u}{\sigma', t'/u}{\Phi, u : \DTrm{i} T l}}
  \end{mathpar}
\end{lemma}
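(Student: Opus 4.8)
The plan is to unfold the conclusion through the defining clause of $\ldSgsubeq$ at a term binding $u : \DTrm i T l$ (the case $i \in \{\varlevel,\codelevel\}$), and to feed each required piece of data from the three hypotheses. First I would record the easy presuppositions: from $\Ac :: \lstyequiv[\Psi][\Gamma[\sigma]] i t{t'}{T[\sigma]}l$ we get $\lsgctx\Psi$ and $\lsctxwf{\typeof i}{\Gamma[\sigma]}$, and from $\lsgsubeq{\sigma}{\sigma'}{\Phi}$ we get $\lsgctx\Phi$; combining $\lsgctx\Phi$, $\lstypwf[\Phi]\proglevel T l$, $\typing[L]l\Level$ and $i \in \{\varlevel,\codelevel\}$ with the semantic global-context extension rule proved just above, I obtain $\lsgctx{\Phi, u : \DTrm i T l}$.

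For the core obligation, I fix $\tyequiv[L']{\phi}{\phi'}L$ and $\ldSgsubeq[\Psi'][L']{\sigma_1}{\sigma_1'}{\Psi[\phi]}$ and compute
\[
  (\sigma, t/u)[\phi] \circ \sigma_1 = (\sigma[\phi] \circ \sigma_1),\ t[\phi][\sigma_1]/u,
\]
symmetrically on the primed side, while $(\Phi, u : \DTrm i T l)[\phi] = \Phi[\phi],\ u : \DTrm[\Gamma[\phi]]{i}{T[\phi]}{l[\phi]}$. The term-binding clause of $\ldSgsubeq$ then demands (i) the universe-weakened prefix $\ldSgsubeq[\Psi'[\theta]][L'']{(\sigma[\phi]\circ\sigma_1)[\theta]}{(\sigma'[\phi']\circ\sigma_1')[\theta]}{\Phi[\phi][\theta]}$ for every $\theta :: L'' \To L'$, and (ii) since $i \in \{\varlevel,\codelevel\}$, the equality $t[\phi][\sigma_1] = t'[\phi'][\sigma_1']$ together with $\ldtyping[\Psi'][\Gamma[\sigma][\phi][\sigma_1]]{i}{\proglevel}{t[\phi][\sigma_1]}{T[\sigma][\phi][\sigma_1]}{l[\phi]}$, plus the statement symmetric to (ii) over the $\Phi'$, $\Gamma'$, $T'$ side of the derivation. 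For (i) I weaken the given related global substitutions by $\theta$ (the Weakening lemma for semantic global substitutions), rewrite $\Psi[\phi][\theta] = \Psi[\phi\circ\theta]$ and $(\sigma[\phi])[\theta] = \sigma[\phi\circ\theta]$ via the algebra of universe and global substitutions, and then invoke $\lsgsubeq{\sigma}{\sigma'}{\Phi}$ at the universe substitution $\phi\circ\theta$ and those weakened substitutions, which produces the composed prefix directly.

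For (ii) I instantiate $\Ac$: feed it $\phi$ (and $\phi'$), the related substitutions $\sigma_1,\sigma_1'$ over $\Psi[\phi]$, the layer $k = i$, and a reflexive local identity substitution (\Cref{lem:dt:refl-lid}, using $\lsctxwf{\typeof i}{\Gamma[\sigma]}$ with its escape and, since $\typeof i = \proglevel$, a layering restriction to recover the needed $\ldctxeq$ derivation). Because $i \in \{\varlevel,\codelevel\}$ and $\typeof i = \proglevel$, the resulting $\ldtyequivt[\Psi'][\Gamma[\sigma][\phi][\sigma_1]][L']{i}{\proglevel}{t[\phi][\sigma_1]}{t'[\phi'][\sigma_1']}{T[\sigma][\phi][\sigma_1]}{l[\phi]}$ is by definition exactly the inductive judgment $\ldtyping[\Psi'][\Gamma[\sigma][\phi][\sigma_1]]{i}{\proglevel}{t[\phi][\sigma_1]}{T[\sigma][\phi][\sigma_1]}{l[\phi]}$ with $t[\phi][\sigma_1] = t'[\phi'][\sigma_1']$; moreover the static code lemma (\Cref{lem:dt:static}), applied to $t \approx t'$ extracted from $\Ac$ by escape, gives $t = t'$. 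It then remains to rewrite $\Gamma[\sigma][\phi][\sigma_1]$ and $T[\sigma][\phi][\sigma_1]$ into $\Gamma[\phi][\sigma[\phi]\circ\sigma_1]$ and $T[\phi][\sigma[\phi]\circ\sigma_1]$, which is commutativity of substitutions ($X[\sigma][\phi] = X[\phi][\sigma[\phi]]$) followed by the algebra of global substitutions ($X[\phi][\sigma[\phi]][\sigma_1] = X[\phi][\sigma[\phi]\circ\sigma_1]$). The symmetric statement over $\Phi'$, $\Gamma'$, $T'$ follows by the same reasoning from the symmetric components carried by $\lsgsubeq{\sigma}{\sigma'}{\Phi}$ and $\Ac$, with symmetry of the semantic and logical relations.

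I expect the main obstacle to be bookkeeping rather than anything conceptually deep: threading the universe weakening $\theta$ correctly through the inductive-recursive definition of $\ldSgsubeq$, and reconciling the two conventions for the type of $t$ — stated over $\Gamma[\sigma]$ in $\Ac$ but required over $\Phi[\phi]$ (equivalently, over $\sigma[\phi]$ composed with $\sigma_1$) in the conclusion — using the commutativity and algebra lemmas for universe, global and local substitutions. One should also confirm that the reflexive local identity substitution is supplied at the correct layer $k = i$, matching the ``$k \in \{\varlevel,\codelevel\}$, so discard local substitutions by \Cref{lem:dt:lsubst-cp}'' pattern used in the neighbouring semantic rules.
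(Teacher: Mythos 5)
Your decomposition of the obligation matches the paper's intended argument: unfold the term-binding clause of $\ldSgsubeq$ at the universe-shifted binding, handle the prefix by composing and weakening the hypothesis on $\sigma \approx \sigma'$, and obtain the head component by instantiating $\Ac$ and rewriting with the algebra of substitutions. Your observation that the resulting $\ldtyequivt i \proglevel$ at $i \in \{\varlevel,\codelevel\}$ unfolds by definition into the inductive judgment $\ldtyping i \proglevel$ together with a syntactic equality is correct and is a genuinely useful elaboration, because the clause of $\ldSgsubeq$ for a term binding at $i \in \{\varlevel,\codelevel\}$ does require $\ldtyping$; the paper's own proof sketch writes its goal with $\ldStyequivge$, which is the clause anticipated only for a hypothetical $i = \proglevel$ extension, and then compresses the rest to ``given by $\Ac$, modulo converting weakenings to substitutions.''

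There is one concrete gap, which you flag yourself but resolve with the wrong tools. To instantiate $\Ac$ at $k = i \in \{\varlevel,\codelevel\}$ you must supply $\ldsubeq i \proglevel{\id}{\id}{\Gamma[\sigma][\phi][\sigma_1]}$, which for these layers means the inductive judgment $\ldsubst i \proglevel{\id}{\Gamma[\sigma][\phi][\sigma_1]}$ of \Cref{sec:dt:semglob}. \Cref{lem:dt:refl-lid} only yields the Kripke relation for $i \in \{\proglevel,\metalevel\}$, and there is no layering-restriction lemma that carries you from $\proglevel$ down to $\varlevel$ or $\codelevel$: the layering-restriction results in the paper relate the index $j$ between $\proglevel$ and $\metalevel$, not the layer $i$. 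The right source here is the Local Weakenings lemma of \Cref{sec:dt:semglob} (which gives $\ldsubst \codelevel \proglevel{\id}{\Delta}$ when $\Delta$ is semantically related), together with the observation that $\id$ consists only of variables so the same derivation is available at $\varlevel$, or, more uniformly, \Cref{lem:dt:sem:lwk} followed by escape at $k = i$. With that repair the remaining bookkeeping — $\Gamma[\sigma][\phi][\sigma_1] = \Gamma[\phi][\sigma[\phi]\circ\sigma_1]$ and its type analogue, the $\theta$-weakened prefix obligation, the symmetric obligation over $\Gamma'$ and $T'$ — goes through exactly as you describe. (Also, once $\ldtyequivt i \proglevel$ is unfolded, the equality $t[\phi][\sigma_1] = t'[\phi'][\sigma_1']$ is already part of the definition, so the separate appeal to \Cref{lem:dt:static} is redundant.)
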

\begin{proof}
  Assuming $\tyequiv[L']{\phi}{\phi'}L$ and
  $\ldSgsubeq[\Psi'][L']{\sigma_1}{\sigma_1'}{\Psi[\phi]}$, we have to show
  \[
    \ldSgsubeq[\Psi'][L']{\sigma[\phi] \circ \sigma_1, t[\phi][\sigma_1]/u}{\sigma'[\phi] \circ
      \sigma_1', t[\phi][\sigma_1']/u}{(\Phi, u : \DTrm{i} T l)[\phi]}
  \]
  Our goal is to show, without loss of generality,
  \[
    \ldStyequivge[\Psi][\Gamma[\sigma[\phi] \circ \sigma_1]] i
    \proglevel{t[\phi][\sigma_1]}{t'[\phi][\sigma_1']}{T[\sigma[\phi] \circ \sigma_1]} l
  \]
  This is given by $\Ac$, modulo converting weakenings to substitutions. %
\end{proof}

\begin{lemma}
  \begin{mathpar}
    \inferrule*
    {\lsgsubeq{\sigma}{\sigma'}{\Phi} \\ \lsctxwf[\Phi] \proglevel \Gamma \\ \typing[L]l\Level
      \\ i \in \{\codelevel, \proglevel\} \\
      \lstypeq[\Psi][\Gamma[\sigma]] i{T}{T'}l}
    {\lsgsubeq{\sigma, T/U}{\sigma', T'/U}{\Phi, u : \DTyp i l}}    
  \end{mathpar}
\end{lemma}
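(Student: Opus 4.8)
The plan is to peel off the definition of semantic relatedness of the extended global substitutions and meet its three obligations. The source-context obligation, semantic well-formedness of $\Psi$, is immediate from the premise $\lsgsubeq\sigma{\sigma'}\Phi$. For the target-context obligation — semantic well-formedness of $\Phi, u : \DTyp i l$ — I would invoke the already-established fundamental-theorem rule that extends a semantically well-formed global context by a type binding; its four hypotheses ($\lsgctx\Phi$, which $\lsgsubeq\sigma{\sigma'}\Phi$ supplies; $\lsctxwf[\Phi]\proglevel\Gamma$; $\typing[L]l\Level$; and $i \in \{\codelevel,\proglevel\}$) are exactly hypotheses of the present lemma. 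All the substance is in the third obligation.

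For that, fix a universe-substitution pair $\tyequiv[L']{\phi}{\phi'}L$ and related global substitutions $\ldSgsubeq[\Psi'][L']{\sigma_1}{\sigma_1'}{\Psi[\phi]}$, and aim to show $\ldSgsubeq[\Psi'][L']{(\sigma,T/U)[\phi]\circ\sigma_1}{(\sigma',T'/U)[\phi']\circ\sigma_1'}{(\Phi,u:\DTyp i l)[\phi]}$. I would first rewrite the two cons-compositions, using the definitions of universe substitution and of global-substitution composition, as $(\sigma[\phi]\circ\sigma_1),\,T[\phi][\sigma_1]/U$ and $(\sigma'[\phi']\circ\sigma_1'),\,T'[\phi'][\sigma_1']/U$, and split the target into its head $\Phi[\phi]$ and its new binding $U : \DTyp[\Gamma[\phi]] i{l[\phi]}$. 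Instantiating the premise $\lsgsubeq\sigma{\sigma'}\Phi$ at $\phi,\phi',\sigma_1,\sigma_1'$ produces a derivation $\mathcal{D}_1 :: \ldSgctxeq{\Phi[\phi]}{\Phi_1'}$ together with the relatedness of the two heads over $\mathcal{D}_1$. From $\mathcal{D}_1$ I assemble the required global-context derivation $\mathcal{D}$ for $\ldSgctxeq{(\Phi,u:\DTyp i l)[\phi]}{(\Phi_1',\,U : \DTyp[\Gamma_1'] i{l[\phi]})}$ by the rule that extends the semantic global-context relation by a type binding: its universe-weakening-quantified premise follows from $\mathcal{D}_1$ by the Weakening lemma for that relation, and its context-stability premise follows from $\lsctxwf[\Phi]\proglevel\Gamma$, instantiated at the composites of $\phi,\phi'$ with the ambient universe weakening and at the given related global substitutions — here I need the algebra of universe substitutions to view $\Gamma[\phi]$ followed by a further weakening as a single substitution action, and right irrelevance for the semantic global-context relation to reconcile the derivation $\lsctxwf$ returns with the weakened copy of $\mathcal{D}_1$.

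It then remains to check the tail entries $T[\phi][\sigma_1]/U$ and $T'[\phi'][\sigma_1']/U$, which I would do by case analysis on $i$. When $i = \proglevel$, the definition asks for substitution-stable relatedness of $T[\phi][\sigma_1]$ and $T'[\phi'][\sigma_1']$ at level $l[\phi]$, both over the codomain context of $\mathcal{D}_1$ and, left-biased, over $\Gamma_1'$; these come from the hypothesis $\lstypeq[\Psi][\Gamma[\sigma]] i{T}{T'}l$ after pushing $\phi$ in with the Universe Substitutions lemma for semantic judgments, pushing the head global substitution in with the Global Substitutions lemma, and applying the Escape lemma at each layer $k \ge \proglevel$ (using symmetry, transitivity and right irrelevance of the logical relations to obtain the $\Gamma_1'$-biased variant); one also uses the interaction identities between the three kinds of substitution to see that $T[\sigma][\phi][\sigma_1]$ and $T[\phi][\sigma[\phi]\circ\sigma_1]$ agree, so the pushed hypothesis lands over the local context expected by the target binding. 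When $i = \codelevel$, the definition additionally demands $T[\phi][\sigma_1] = T'[\phi'][\sigma_1']$ and the $\codelevel$-layer syntactic-and-semantic tracking predicate for $T[\phi][\sigma_1]$ over both codomain contexts; escaping $\lstypeq[\Psi][\Gamma[\sigma]]\codelevel{T}{T'}l$ down to $k = \codelevel$ lands in the logical-relation clause indexed by $\codelevel$, which is by definition that very tracking predicate conjoined with syntactic equality, so both requirements fall out directly. As is routine in this section, whenever the working layer $k$ lies in $\{\varlevel,\codelevel\}$ I would invoke the local-substitution lemma for the tracking predicates and drop the local substitution.

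I expect the main obstacle to be the bookkeeping around the context-stability premise of $\mathcal{D}$: correctly threading the ambient universe weakening and the further related global substitutions through $\lsctxwf[\Phi]\proglevel\Gamma$, realigning the semantic global-context derivation it yields with the weakened copy of $\mathcal{D}_1$ via right irrelevance, and making the composition and commutativity laws for universe, global and local substitutions line up so that the rearrangements of expressions such as $\Gamma[\phi][\sigma_1]$ and $T[\sigma][\phi][\sigma_1]$ hold on the nose. By comparison, the type-level obligations are a routine combination of the three substitution lemmas for semantic judgments with the escape lemma and the $i$-case split.
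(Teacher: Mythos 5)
Your proposal is correct and follows the same route the paper takes — the paper's own proof is just ``Similar to the previous case but simpler,'' and you have spelled out exactly the unfolding of the previous (term-binding) case adapted to the type binding: assume $\phi,\phi',\sigma_1,\sigma_1'$, split the cons-composition, discharge the two global-context side conditions via the already-proved semantic $\lsgctx$ rules, and reduce the tail obligation to the $\lstypeq$ hypothesis pushed through the universe and global substitution lemmas, with the $i$-case split matching the two clauses of the $\ldSgsubeq$ definition for a $\DTyp$ binding. One cosmetic slip: the substitution identity you need for the tail entry is on the local context, $\Gamma[\sigma][\phi][\sigma_1] = \Gamma[\phi][\sigma[\phi]\circ\sigma_1]$, not on $T$ — unlike the term case, the present premise $\lstypeq[\Psi][\Gamma[\sigma]] i T{T'}l$ has $T$ raw (no implicit $[\sigma]$), so $T[\phi][\sigma_1]$ already lands on the nose and only the context realignment needs the algebra of substitutions; this is precisely why the paper calls this case ``simpler.''
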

\begin{proof}
  Similar to the previous case but simpler. 
\end{proof}

\begin{corollary}[Global Weakening Substitutions]\labeledit{lem:dt:sem:gwk}
  $\lsgsubst[\Psi,\Phi] {\wk_{\Psi}^{|\Phi|}}{\Psi}$
\end{corollary}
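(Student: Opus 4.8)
The plan is to derive this corollary from the semantic rules for global substitutions established immediately above, by induction on $\judge[L]\Psi$, generalised over the suffix $\Phi$. First I would unfold the goal: $\lsgsubst[\Psi,\Phi]{\wk_\Psi^{|\Phi|}}\Psi$ is by definition $\lsgsubeq[\Psi,\Phi]{\wk_\Psi^{|\Phi|}}{\wk_\Psi^{|\Phi|}}\Psi$, which asks for $\lsgctx{\Psi,\Phi}$, $\lsgctx\Psi$, and the usual stability clause under related universe substitutions $\tyequiv[L']{\phi}{\phi'}L$ and $\ldSgsubeq[\Psi'][L']{\sigma_1}{\sigma_1'}{(\Psi,\Phi)[\phi]}$. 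The first two obligations are immediate from the Fundamental Theorem applied to $\judge[L]{\Psi,\Phi}$ (the standing assumption, per the paper's convention) and to $\judge[L]\Psi$, the latter obtained from $\judge[L]{\Psi,\Phi}$ by the routine prefix-strengthening lemma for well-formedness of global contexts.

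For the stability clause I would induct on the structure of $\Psi$ (equivalently on $\judge[L]\Psi$), reading off $\wk_\Psi^{|\Phi|}$ from its definition and matching each case against the corresponding semantic rule just proved. When $\Psi = \cdot$, $\wk_\cdot^{|\Phi|} = \cdot$ and the rule $\lsgsubeq{\cdot}{\cdot}{\cdot}$ applies, its premise $\lsgctx{\Phi}$ coming from the Fundamental Theorem. When $\Psi = \Psi_0, g : \Ctx$, we have $\wk_\Psi^{|\Phi|} = \wk_{\Psi_0}^{1+|\Phi|}, g/g$, and we apply the $g:\Ctx$ extension rule, using the induction hypothesis for $\Psi_0$ with suffix $g:\Ctx,\Phi$ (hence the generalisation over $\Phi$) together with the semantic rule that gives $\lsctxeq{\proglevel}{g}{g}$ for $g$ in the ambient context. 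When $\Psi = \Psi_0, U : \DTyp{i'}{l}$ or $\Psi = \Psi_0, u : \DTrm{i'}{T}{l}$, we append the neutral $U^{\id_\Gamma}$, resp.\ $u^{\id_\Gamma}$ (suitably globally weakened), and again invoke the induction hypothesis and the relevant semantic extension rule.

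The main obstacle is exactly these two extension cases. There one must supply a semantic typing $\lstypeq{i'}{U^{\id_\Gamma}}{U^{\id_\Gamma}}{l}$ (resp.\ $\lstyequiv{i'}{u^{\id_\Gamma}}{u^{\id_\Gamma}}{T[\dots]}{l}$) for the appended neutral in the weakened global context $\Psi_0, U : \DTyp{i'}{l}, \Phi$ (resp.\ with $u$). This is assembled from the semantic rule for global (type/term) variables, the semantic well-formedness of the weakened local identity substitution provided by \Cref{lem:dt:sem:lwk} together with \Cref{lem:dt:wkact}, and the algebraic identity $\Gamma[\wk_{\Psi_0}^{1+|\Phi|}] = \Gamma[p^{1+|\Phi|}(\id)]$ (and its type-level analogue) which makes the annotations on the appended entry line up with those demanded by the extension rule. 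This is precisely the bookkeeping that made the code-extension case the only nontrivial one in the simple-type counterpart, the validity-of-global-weakening-substitutions lemma in \Cref{sec:cv:logrel}.

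Every other proof obligation is discharged directly by the induction hypothesis and the already-established semantic rules, so the work is concentrated in that single neutral-substitution argument. If the algebra became unwieldy I would fall back on an alternative route: unfold the semantic definition down to the Kripke level, rewrite $(\wk_\Psi^{|\Phi|})[\phi]$ as $\wk_{\Psi[\phi]}^{|\Phi[\phi]|}$ via \Cref{lem:dt:wkact}, and then peel off the entries of $\Phi[\phi]$ one at a time from $\ldSgsubeq[\Psi'][L']{\sigma_1}{\sigma_1'}{(\Psi,\Phi)[\phi]}$ using the composition laws for global substitutions, reusing the Reflexive Global Weakening lemma of \Cref{sec:dt:semglob} for the residual identity prefix.
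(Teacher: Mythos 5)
Your proof is correct and takes the same route the paper implicitly intends: induct on $\judge[L]\Psi$ generalised over the suffix $\Phi$, at each step discharging the extension by the semantic global-substitution rules proved immediately above the corollary, feeding in the reflexive weakened identity local substitution from \Cref{lem:dt:sem:lwk} and the algebraic identity $\wk_\Psi^{|\Phi|}[\phi]=\wk_{\Psi[\phi]}^{|\Phi[\phi]|}$ from \Cref{lem:dt:wkact} for the code and type cases. Your fallback route --- unfold to the Kripke level and invoke the Reflexive Global Weakening lemma of \Cref{sec:dt:semglob} after rewriting with \Cref{lem:dt:wkact} --- is also sound and is arguably the shorter derivation for a statement labelled a corollary; the two are essentially interchangeable once the Reflexive Global Weakening lemma is in hand.
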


\begin{lemma}
  \begin{mathpar}
    \inferrule
    {\tyequiv[L]{l_1}{l_3}\Level \\ \tyequiv[L]{l_2}{l_4}\Level \\
      \lstypeq i S{S'}{l_1} \\ \lstypeq[\Psi][\Gamma, x : S \at{l_1}] iT{T'}{l_2}}
    {\lstypeq i{\PI{l_1}{l_2} x S T}{\PI{l_3}{l_4} x{S'}{T'}}{l_1 \sqcup l_2}}
  \end{mathpar}
\end{lemma}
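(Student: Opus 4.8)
The plan is to unfold the definition of $\lstypeq i{\PI{l_1}{l_2} x S T}{\PI{l_3}{l_4} x{S'}{T'}}{l_1 \sqcup l_2}$ and argue layer by layer, exactly as in the earlier semantic rules (e.g.\ the cases for $\Nat$ and $\su t$). From the two hypotheses I would first read off the side conditions $\lsgctx \Psi$, $\lsctxwf{\typeof i}\Gamma$ and $\lsctxwf{\typeof i}{\Gamma, x : S \at{l_1}}$; since $\Pi$-formation is not among the rules available at layer $\varlevel$, it suffices to treat $i \in \{\codelevel, \proglevel, \metalevel\}$. Then, fixing $\tyequiv[L']{\phi}{\phi'}L$, $\ldSgsubeq[\Phi][L']{\sigma}{\sigma'}{\Psi[\phi]}$, a layer $k \ge i$, and — when $k$ is computable — a related local substitution $\ldsubeq[\Phi][\Delta][L'] k{\typeof i}{\delta}{\delta'}{\Gamma[\phi][\sigma]}$, I would push all these substitutions through the $\Pi$-constructor using the algebra-of-substitutions lemmas, reducing the goal to exhibiting a derivation in the $\Pi$-clause of the Kripke logical relation at layers $k$ and $\typeof i$ relating $\PI{l_1[\phi]}{l_2[\phi]} x {S[\phi][\sigma][\delta]}{\cdots}$ with $\PI{l_3[\phi']}{l_4[\phi']} x {S'[\phi'][\sigma'][\delta']}{\cdots}$ (using the full form of that clause, which permits equivalent universe annotations, so that $l_1\approx l_3$ and $l_2\approx l_4$ are absorbed; the substituted bodies come from lifting $\delta$, $\delta'$ under the bound variable). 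Since $\Pi$-types are already in weak head normal form, the two reductions required by that clause are vacuous.

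To build the derivation I would supply its remaining components as follows. The well-formedness premises on $S$, $S'$ and on $T$, $T'$ follow from the Escape lemma for the logical relations together with the syntactic local-substitution lemma. The generic-equivalence premise follows from the Type Constructors law of the generic equivalence, fed by escaping the logical-relation instance of $S,S'$ at the identity weakening, and the instance of $T,T'$ at the identity weakening applied to the reflexive variable (using Reflexivity of Neutral and the reflexive local identity substitution). The functor $\E$ — relatedness of $S$ and $S'$ under every further weakening $\psi$ — comes from weakening the semantic judgment $\lstypeq i S{S'}{l_1}$ along $\psi$ (stability of semantic judgments under universe, global and local substitutions, plus the Weakening lemma for the logical relation) and then escaping. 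Symmetrically, for $\F$, given $\psi$ and a witness of $s \approx s'$ in $\E(\psi)$, I would extend the current local substitution by $s/x$ resp.\ $s'/x$: this is precisely a related local substitution into $(\Gamma, x : S \at{l_1})[\phi][\sigma]$, its type component supplied by $\E(\psi)$; applying the second hypothesis $\lstypeq[\Psi][\Gamma, x : S \at{l_1}] i T{T'}{l_2}$ and rewriting with the substitution algebra then yields that the instantiated bodies are related.

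For the computable target layers $k \in \{\proglevel, \metalevel\}$ this argument goes through uniformly, the $\Pi$-clause of the Kripke relation being stated for general $j$. The remaining corner is $i = \codelevel$ with $k = \codelevel$: here I would invoke the local-substitution lemma to discard $\delta$, so that the obligation becomes $\ldtypwf[\Phi][\Gamma[\phi][\sigma]][L']\codelevel\proglevel{\PI{l_1[\phi]}{l_2[\phi]} x {S[\phi][\sigma]}{T[\phi][\sigma]}}{(l_1\sqcup l_2)[\phi]}$. By the Static Code lemma, escaping the two hypotheses at layer $\codelevel$ yields not only the semantic well-formedness of $S[\phi][\sigma]$ and of $T[\phi][\sigma]$ but also that they coincide syntactically with $S'[\phi][\sigma]$ and $T'[\phi][\sigma]$; the $\Pi$-clause of the $\ldtypwf[][][]\codelevel\proglevel{}{}$ judgment then applies, its side premises being exactly these two escaped facts together with the stability statement $\ldStypwfge\proglevel\proglevel{\PI{l_1}{l_2} x S T}{l_1\sqcup l_2}$ already produced by the $k=\proglevel$ analysis.

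The main obstacle is the substitution bookkeeping around the bound variable: aligning the local substitution lifted under the binder with the composite used in the definition of the logical relation on the extended context $\Gamma, x : S \at{l_1}$, and matching the concrete derivation witnesses $\E(\psi)$ and $\F(\psi,\cdot)$ that the $\Pi$-clause expects — which will require careful appeals to the Right/Left Irrelevance and Reflexivity lemmas. This is not conceptually deep, but it is where a sloppy treatment of the difference between application $[\delta]$, the lifted substitution, and composition $\circ$ would break the proof; everything else is an instance of the pattern already rehearsed for the simpler semantic rules.
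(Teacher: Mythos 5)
Your proposal matches the paper's proof: both unfold the semantic judgment, peel off $\phi,\sigma,k,\delta$, observe that the $\Pi$-constructor is already in weak-head normal form, build $\E$ by weakening the first hypothesis, build $\F$ by extending the local substitution with $s/x$ (justified via $\lsctxeq{\typeof i}{\Gamma,x:S}{\Gamma,x:S'}$), and handle $i=\codelevel$ by discarding $\delta$ with the local-substitution lemma and falling back on the already-established $\ldStypwfge$ from the $k\ge\proglevel$ analysis. You spell out a few steps the paper leaves terse — notably invoking Static Code and the $\Pi$-clause of $\ldtypwf[][][]\codelevel\proglevel{}{}$ explicitly in the $\codelevel$ case, and naming the Type Constructors law for the generic-equivalence premise — but these are unpackings of what the paper means, not a different route.
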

\begin{proof}
  From $\lstypeq i S{S'}{l_1}$, we can conclude $\lsgctx \Psi$ and $\lsctxwf {\typeof
    i} \Gamma$. %
  Now assuming $\tyequiv[L']{\phi}{\phi'}L$ and
  $\ldSgsubeq[\Phi][L'] \sigma{\sigma'}{\Psi[\phi]}$, we have to consider all
  $i \ge \codelevel$. %
  \begin{itemize}
  \item[Case $i = \metalevel$] Then assuming
    $\ldsubeq[\Phi][\Delta][L'] \metalevel \metalevel \delta{\delta'}{\Gamma[\phi][\sigma]}$, we must
    show
    \[
      \ldtypeq[\Phi][\Delta][L'] \metalevel \metalevel{\PI{l_1}{l_2} x S T[\phi][\sigma][\delta]}{\PI{l_3}{l_4} x{S'}{T'}[\phi'][\sigma'][\delta']}{(l_1 \sqcup l_2)[\phi]}
    \]
    From escape lemmas, we are able to establish the reduction premises, typing
    premises and the generic equivalence. %
    We then only focus on the semantic premises. %
    First, from $\lstypeq i S{S'}{l_1}$, we obtain
    \[
      \ldtypeq[\Phi][\Delta][L'] \metalevel
      \metalevel{S[\phi][\sigma][\delta]}{S'[\phi'][\sigma'][\delta']}{l_1[\phi]}
    \]
    Then we further assume
    $\psi :: L'' \sep \Phi'; \Delta' \To_m L' \sep \Phi; \Delta$ and
    $\ldtyequivt[\Phi'][\Delta'][L''] \metalevel \metalevel{s}{s'}{S[\phi][\sigma][\delta]}{l_1[\phi]}$,
    we should prove
    \[
      \ldtypeq[\Phi'][\Delta'][L''] \metalevel
      \metalevel{T[\phi][\sigma][\delta, s/x]}{T'[\phi'][\sigma'][\delta',s'/x]}{l_1[\phi]}
    \]
    We are almost there, as long as we can provide
    \[
      \ldsubeq[\Phi'][\Delta'][L''] \metalevel \metalevel {\delta,s/x}{\delta',s'/x}{(\Gamma, x : S \at{l_1})[\phi][\sigma]}
    \]
    To prove these two local substitutions are related, we are interested in showing \break
    $\lsctxeq{\typeof i}{\Gamma, x : S \at{l_1}}{\Gamma, x : S' \at{l_3}}$, but
    this is immediate from a previous lemma.
  \item[Case $i = \proglevel$] This case follows similarly to the previous case. %
    It must range over $k \in \{\proglevel, \metalevel\}$ so a similar reasoning must be repeated twice.
    
  \item[Case $i = \codelevel$]
    This case is much simpler by using the local substitution lemma to remove the need
    to assume another local substitution. %
    Then we can simply apply identity local substitutions to all premises and use the
    previous case to conclude
    \[
      \ldStypwfge[\Phi][\Gamma[\phi][\sigma]][L'] \proglevel \proglevel{\PI{l_1}{l_2} x S T}{(l_1\sqcup l_2)[\phi]}
    \]
    If we introduce another local substitution, then we must reason about extending a
    local variable to an arbitrary local substitution, which is quite verbose and
    unnecessary. %
  \end{itemize}
\end{proof}

\begin{lemma}
  \begin{mathpar}
    \inferrule
    {\tyequiv[L]{l}{l'}\Level \\ \lstyequiv i t{t'}{\Ty l}{1 + l}}
    {\lstypeq i{\Elt l t}{\Elt{l'}{t'}}{l}}
  \end{mathpar}
\end{lemma}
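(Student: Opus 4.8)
The plan is to unfold the target semantic judgment $\lstypeq i{\Elt l t}{\Elt{l'}{t'}}{l}$ and, after fixing all the relevant substitutions, reduce it to the universe case of the Kripke logical relation supplied by the hypothesis $\lstyequiv i t{t'}{\Ty l}{1 + l}$. From that hypothesis I first read off the side conditions $\lsgctx\Psi$ and $\lsctxwf{\typeof i}\Gamma$, which are exactly the standing assumptions the target judgment also demands. Then, fixing $\tyequiv[L']{\phi}{\phi'}L$, $\ldSgsubeq[\Phi][L']\sigma{\sigma'}{\Psi[\phi]}$, a layer $k \ge i$, and related local substitutions $\ldsubeq[\Phi][\Delta][L']k{\typeof i}\delta{\delta'}{\Gamma[\phi][\sigma]}$, the obligation becomes $\ldtypeq[\Phi][\Delta][L']k{\typeof i}{\Elt{l[\phi]}{(t[\phi][\sigma][\delta])}}{\Elt{l'[\phi']}{(t'[\phi'][\sigma'][\delta'])}}{l[\phi]}$. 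Applying the hypothesis to the same data yields a derivation $\D'$ relating $\Ty{l[\phi]}$ and $\Ty{l[\phi']}$ at level $(1+l)[\phi]$, together with the two substituted terms related by $\D'$. Since $\Ty{l[\phi]}$ is already a weak head normal type, determinacy forces $\D'$ into the universe clause (possibly wrapped by the universe-level-equivalence clause), whose unfolding hands us the weak head reducts $w$ and $w'$ of the substituted terms and, crucially, the relation between the decodings $\ldtypeq[\Phi][\Delta][L']k{\typeof i}{\Elt{l[\phi]}{w}}{\Elt{l'[\phi']}{w'}}{l[\phi]}$.

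The second step is to transport this relation on the reducts back to the unreduced decodings. Iterating the reduction congruence rule for $\tEl$ along $t[\phi][\sigma][\delta] \reds w$ and $t'[\phi'][\sigma'][\delta'] \reds w'$ gives $\Elt{l[\phi]}{(t[\ldots])} \reds \Elt{l[\phi]}{w}$ and $\Elt{l'[\phi']}{(t'[\ldots])} \reds \Elt{l'[\phi']}{w'}$, and the escape lemma together with the typing rule for $\tEl$ supplies the well-formedness needed to turn these into proper reduction-with-typing sequences. The Weak Head Expansion lemma for the logical relations then lifts the relation on the decoded reducts to the required relation on $\Elt{l[\phi]}{(t[\ldots])}$ and $\Elt{l'[\phi']}{(t'[\ldots])}$. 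The difference between the annotations $l[\phi]$ and $l'[\phi']$ is immaterial: they are equivalent by the universe substitution lemma applied to $\tyequiv[L]l{l'}\Level$ and $\phi \approx \phi'$, and the relation does not distinguish equivalent annotations — when $w'$ is a genuine type former the $\tEl$-reduction erases the annotation entirely, and when $w'$ is neutral the Neutral Types law of the generic equivalence explicitly permits equivalent annotations (wrapping once more in the universe-level-equivalence clause where convenient). This settles the $i = \metalevel$ case, and rerunning it over both $k \in \{\proglevel,\metalevel\}$ settles $i = \proglevel$.

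For the lower layers $i \in \{\varlevel,\codelevel\}$ I would follow the pattern used throughout the surrounding lemmas: when $k \in \{\varlevel,\codelevel\}$, invoke the local-substitution lemma to drop the local substitution, so that the goal reduces to $\ldtypwf\codelevel\proglevel{\Elt{l[\phi]}{(t[\phi][\sigma])}}{l[\phi]}$ plus a syntactic equality of the two decodings; the well-formedness follows from the $\tEl$-rule of that judgment, fed by the layer-$\codelevel$ typing component extracted from the hypothesis (which records that $t$ is well-typed code of type $\Ty l$) and the layer-$\proglevel$ stability fact obtained in the previous step, while the syntactic equality follows from the Static Code lemma modulo universe-level equivalence. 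I expect the main obstacle to be purely bookkeeping rather than conceptual: keeping the several incarnations of the universe level ($l$ versus $l'$, and their $\phi$- and $\phi'$-images) consistent everywhere and making the step ``the logical relation ignores equivalent level annotations'' precise enough that Weak Head Expansion can be applied uniformly across the neutral and non-neutral subcases. The per-layer case analysis itself is routine once the $\metalevel$/$\proglevel$ core argument is assembled.
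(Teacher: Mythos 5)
Your proof is correct and follows essentially the same route as the paper's, which compresses the whole argument into a single sentence: since $\Ty l$ is already weak head normal, the relation $\D$ supplied by the hypothesis $\lstyequiv i t{t'}{\Ty l}{1 + l}$ must (up to the universe-level-equivalence wrapper) be the universe clause, and that clause's defining condition $\ldtypeq i j{\Elt l w}{\Elt l{w'}}{l}$ is exactly the decoding relation you then lift by weak head expansion. Your unfolding of the semantic judgment, the per-layer bookkeeping for $i\in\{\varlevel,\codelevel,\proglevel,\metalevel\}$, and the handling of the $l$ vs.\ $l'$ annotation discrepancy modulo universe-level equivalence all agree with the paper's (implicit) intent.
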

\begin{proof}
  We use the fact that $\Ty{l}$ reduces only to itself and therefore it is only
  possible to expand $t \approx t'$ to the universe case. %
\end{proof}

\begin{lemma}
  \begin{mathpar}
    \inferrule
    {\tyequiv[L]{l_1}{l_3}\Level \\ \tyequiv[L]{l_2}{l_4}\Level \\\\
      \lstyequiv i s{s'}{\Ty{l_1}}{1 + l_1} \\
      \lstyequiv[\Psi][\Gamma, x : \Elt{l_1} s \at{l_1}] i t{t'}{\Ty{l_2}}{1 + l_2}}
    {\lstyequiv i{\PI{l_1}{l_2} x s t}{\PI{l_3}{l_4} x{s'}{t'}}{\Ty{l_1 \sqcup l_2}}{\su{(l_1 \sqcup l_2)}}}
  \end{mathpar}
\end{lemma}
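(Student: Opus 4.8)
The plan is to mirror the preceding fundamental-theorem cases, in particular the semantic rule for $\PI$-types and the semantic rule for $\Elt l t$, and to exploit the reduction ${\Elt{l \sqcup l'}{\PI l{l'} x s t}}\redd{\PI l{l'} x{\Elt l s}{\Elt{l'} t}}$ to bridge back to the already-proved $\PI$-congruence rather than redo the function-type argument. First I would extract $\lsgctx\Psi$ and $\lsctxwf{\typeof i}\Gamma$ from the hypothesis $\lstyequiv i s{s'}{\Ty{l_1}}{1 + l_1}$, and from the already-established semantic rule for $\Ty l$ obtain $\lstypwf{\typeof i}{\Ty{l_1 \sqcup l_2}}{\su{(l_1 \sqcup l_2)}}$, using $\tyequiv[L]{l_1 \sqcup l_2}{l_3 \sqcup l_4}\Level$, which follows from the two given level equivalences by congruence. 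I would then unfold the semantic judgment: assume $\tyequiv[L']{\phi}{\phi'}L$, $\ldSgsubeq[\Phi][L']\sigma{\sigma'}{\Psi[\phi]}$, some $k \ge i$ and $\ldsubeq[\Phi][\Delta][L']k{\typeof i}\delta{\delta'}{\Gamma[\phi][\sigma]}$, and split on $i \in \{\codelevel,\proglevel,\metalevel\}$ (layer $\varlevel$ does not apply, since encoding a $\Pi$-type as a term is an MLTT rule).

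For $i = \metalevel$, the goal is a reducibility-of-term claim at type $\Ty{(l_1 \sqcup l_2)[\phi]}$, so I must supply a derivation $\D$ at the universe case — relating $\Ty{(l_1 \sqcup l_2)[\phi]}$ to $\Ty{(l_3 \sqcup l_4)[\phi']}$, which is immediate — together with the term data of the universe case for $(\PI{l_1}{l_2} x s t)[\phi][\sigma][\delta]$ and its primed counterpart. Both sides are already weak head normal forms, so the reduction premises are trivial; the generic equivalence $\lttrmgeq\metalevel{\PI{l_1}{l_2}x s t}{\PI{l_3}{l_4}x{s'}{t'}}{\Ty{l_1 \sqcup l_2}}{\su{(l_1 \sqcup l_2)}}$ follows from the Type Constructors law of the generic equivalence after escaping the induction hypotheses for $s$ and $t$ (the hypothesis for $t$ lives over the extended context $\Gamma, x : \Elt{l_1} s \at{l_1}$; see below). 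The remaining obligation is that the decoded types $\Elt{(l_1 \sqcup l_2)[\phi]}{(\PI{l_1}{l_2}x s t)[\phi][\sigma][\delta]}$ and the primed analogue are logically related at level $(l_1 \sqcup l_2)[\phi]$; since ${\Elt{l_1 \sqcup l_2}{\PI{l_1}{l_2}x s t}}\redd{\PI{l_1}{l_2}x{\Elt{l_1}s}{\Elt{l_2}t}}$, by the Weak Head Expansion lemma for the logical relations it suffices to relate $\PI{l_1}{l_2}x{\Elt{l_1}s}{\Elt{l_2}t}$ and $\PI{l_3}{l_4}x{\Elt{l_3}s'}{\Elt{l_4}t'}$, which is precisely the escape of the already-proved semantic $\PI$-congruence applied to $\lstypeq i{\Elt{l_1}s}{\Elt{l_3}s'}{l_1}$ and $\lstypeq[\Psi][\Gamma, x : \Elt{l_1}s \at{l_1}]i{\Elt{l_2}t}{\Elt{l_4}t'}{l_2}$. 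Those two semantic type equivalences come in turn from the semantic $\Elt$ rule applied to the two hypotheses (the second over the extended context).

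The case $i = \proglevel$ is identical, run uniformly for every $k \in \{\proglevel,\metalevel\}$; no layering restriction is needed since everything stays inside MLTT. The case $i = \codelevel$ is the usual one: apply the local substitution lemma to discard the extra $\delta$, reducing the goal to $\ldStypingge[\Phi][\Gamma[\phi][\sigma]][L']\proglevel\proglevel{\PI{l_1}{l_2}x s t}{\Ty{l_1 \sqcup l_2}}{\su{(l_1 \sqcup l_2)}[\phi]}$, which is discharged by the $i = \proglevel$ case modulo converting weakenings to substitutions, reading off the corresponding layer-$\codelevel$ typing-with-semantics witness from the semantic rules of \Cref{sec:dt:semglob}.

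The main obstacle I anticipate is the bookkeeping around the extended context $\Gamma, x : \Elt{l_1}s \at{l_1}$: to use the hypothesis for $t$ I must first show that context is semantically well-formed and that extending $\delta$ by a related pair of terms inhabiting $\Elt{l_1}s$ yields a related local substitution for it — this uses the semantic context-extension lemma fed by $\lstypeq i{\Elt{l_1}s}{\Elt{l_3}s'}{l_1}$ — and I must keep the context annotation aligned ($S = \Elt{l_1}s$ on both the $\PI$-rule side and the $\Elt$-rule side). This is the same delicate point that already arises in the semantic $\PI$-type rule; once it is in place, the $\Elt$-decoding reduction is exactly what lets weak head expansion reuse the function-type argument instead of repeating it.
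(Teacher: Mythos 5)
Your proposal is correct and takes essentially the same route as the paper's proof: convert the hypotheses on $s,s'$ and $t,t'$ to semantic type-equivalences of $\Elt{l_1}s$ and $\Elt{l_2}t$ via the semantic $\Elt$ rule, apply the semantic $\Pi$-congruence, use the reduction $\Elt{(l_1\sqcup l_2)}{\PI{l_1}{l_2}x s t}\redd\PI{l_1}{l_2}x{\Elt{l_1}s}{\Elt{l_2}t}$ together with weak head expansion to obtain the decoded-type condition, and read off the universe-case term relation (the paper summarizes all of this as ``reduction expansion''). Your additional bookkeeping — unfolding the semantic judgment, the split on $i$ with the $i=\codelevel$ delegation — is exactly the pattern the paper applies uniformly in this section, with only a minor imprecision in naming $\ldStypingge$ where the $i=\codelevel$ goal is actually the full $\ldtyping\codelevel\proglevel$ witness (of which $\ldStypingge$ is the semantic premise).
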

\begin{proof}
  Very similar to the previous proof. %
  We use the previous lemma and know that
  \[
    \lstypeq i{\Elt {l_1} s}{\Elt{l_3}{s'}}{l_1}
  \]
  We do the same for $t \approx t'$. %
  This gives us
  \[
    \lstypeq i{\PI {l_1}{l_2} x{\Elt{l_1} s}{\Elt{l_2} t}}{\PI {l_3}{l_4}
      x{\Elt{l_3}{s'}}{\Elt{l_4}{t'}}}{l_1 \sqcup l_2}
  \]
  When $i \in \{\proglevel, \metalevel\}$, we use 
  \begin{align*}
    \Elt{l_1 \sqcup l_2}{\PI{l_1}{l_2} x s t} &\redd {\PI {l_1}{l_2} x{\Elt{l_1}
                                                s}{\Elt{l_2} t}} \\
    \Elt{l_3 \sqcup l_4}{\PI{l_3}{l_4} x{s'}{t'}} &\redd {\PI {l_3}{l_4}
                                                    x{\Elt{l_3}{s'}}{\Elt{l_4}{t'}}}
  \end{align*}
  We have the goal using reduction expansion. 
\end{proof}

\begin{lemma}
  \begin{mathpar}
    \inferrule
    {\typing[L]l\Level  \\ \typing[L]{l'}\Level \\
      \lstyping i s{\Ty l}{1 + l} \\ \lstyping[\Psi][\Gamma, x : \Elt l s \at l] i t{\Ty{l'}}{1 + l'}}
    {\lstypeq i{\PI l{l'} x{\Elt l s}{\Elt{l'} t}}{\Elt{l \sqcup l'}{\PI l{l'} x s t}}{l \sqcup l'}}
  \end{mathpar}
\end{lemma}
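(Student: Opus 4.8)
The plan is to follow the proof of the preceding lemma almost verbatim, since the present statement is its type-level twin: it asserts semantic soundness of the $\tEl$-decoding rule for $\Pi$, which (being a decoding/computation rule) is available only at the computable layers, so $i \in \{\proglevel,\metalevel\}$. First I would unfold $\lstypeq i{\PI l{l'} x{\Elt l s}{\Elt{l'} t}}{\Elt{l \sqcup l'}{\PI l{l'} x s t}}{l \sqcup l'}$, reading off $\lsgctx \Psi$ and $\lsctxwf{\typeof i}\Gamma$ from the hypothesis $\lstyping i s{\Ty l}{1+l}$, and then fix an arbitrary $\phi \approx \phi'$, related $\sigma \approx \sigma'$, a layer $k \ge i$, and related $\delta \approx \delta'$, exactly as in the preceding cases of the fundamental theorem.

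The key observation is that $\Elt{l \sqcup l'}{\PI l{l'} x s t} \redd \PI l{l'} x{\Elt l s}{\Elt{l'} t}$ is a single decoding reduction step, and this reduction is stable under all universe, global and local substitutions (by the substitution lemmas for $\redd$, together with the fact that these substitutions distribute over $\tEl$ and $\Pi$). It therefore suffices to show that the left-hand type is semantically well-formed, i.e. related to itself, and that the right-hand type is syntactically well-formed, and then to invoke the Reduction Expansion lemma. Concretely, applying the already-established semantic soundness of the $\tEl$-congruence rule to $s \approx s$ gives $\lstypwf i{\Elt l s}{l}$, and applying it in the extended context to $t \approx t$ gives $\lstypwf[\Psi][\Gamma, x : \Elt l s \at l] i{\Elt{l'} t}{l'}$; feeding these into the semantic soundness of the $\Pi$-congruence rule yields $\lstypwf i{\PI l{l'} x{\Elt l s}{\Elt{l'} t}}{l \sqcup l'}$. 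Escaping the hypotheses to the syntactic level and applying the $\Pi$-encoding rule followed by the $\tEl$ formation rule gives $\lttypwf i{\Elt{l \sqcup l'}{\PI l{l'} x s t}}{l \sqcup l'}$, hence $\ttypred i{\Elt{l \sqcup l'}{\PI l{l'} x s t}}{\PI l{l'} x{\Elt l s}{\Elt{l'} t}}{l \sqcup l'}$. Reduction Expansion then converts $\lstypwf i{\PI l{l'} x{\Elt l s}{\Elt{l'} t}}{l \sqcup l'}$ into the required $\lstypeq i{\PI l{l'} x{\Elt l s}{\Elt{l'} t}}{\Elt{l \sqcup l'}{\PI l{l'} x s t}}{l \sqcup l'}$.

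I do not expect a genuine obstacle, precisely because the preceding lemma already pioneered this exact combination of reduction expansion with the $\tEl$- and $\Pi$-congruence lemmas; the remaining work is bookkeeping. The one point that needs care is orchestrating the three kinds of substitution inside the semantic judgment: one must verify that $(\Elt{l \sqcup l'}{\PI l{l'} x s t})[\phi][\sigma][\delta] \redd (\PI l{l'} x{\Elt l s}{\Elt{l'} t})[\phi][\sigma][\delta]$ after all substitutions, and that the universe annotation behaves, i.e. $(l \sqcup l')[\phi] \approx l[\phi] \sqcup l'[\phi]$ (immediate), using the universe-level-equivalence case of the logical relation to absorb any residual mismatch between $l \sqcup l'$ and the level at which the decoded components naturally sit. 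Keeping straight which layers the intermediate logical-relation judgments carry — $k$ together with $\typeof i$, which for $i = \proglevel$ is simply $\proglevel$ — is the only real source of friction.
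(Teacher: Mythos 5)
Your proposal matches the paper's proof, which simply remarks that the argument is the same as for the encoded-$\Pi$ congruence rule ``except that we only use reduction expansion on one side.'' You correctly identify that side: apply the already-proved semantic $\tEl$- and $\Pi$-congruence lemmas to the reflexive instances $s\approx s$ and $t\approx t$ to obtain $\lstypwf i{\PI l{l'} x{\Elt l s}{\Elt{l'} t}}{l \sqcup l'}$, establish $\ttypred i{\Elt{l \sqcup l'}{\PI l{l'} x s t}}{\PI l{l'} x{\Elt l s}{\Elt{l'} t}}{l \sqcup l'}$ by escape and the reduction rule, and then close by Reduction Expansion.
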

\begin{proof}
  Here $i \in \{\proglevel, \metalevel\}$. %
  The proof is similar to the previous lemma except that we only use reduction
  expansion on one side. 
\end{proof}

\begin{lemma}
  \begin{mathpar}
    \inferrule
    {\tyequiv[L]{l_1}{l_3}\Level \\ \tyequiv[L]{l_2}{l_4}\Level \\
      \lstypeq i S{S'}{l_1} \\
      \lstyequiv[\Psi][\Gamma, x : S \at {l_1}] i t{t'}{T}{l_2}}
    {\lstyequiv i{\LAM {l_1}{l_2} x S t}{\LAM {l_3}{l_4} x{S'}{t'}}{\PI l{l'} x S T}{l_1 \sqcup l_2}}
  \end{mathpar}
\end{lemma}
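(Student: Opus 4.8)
The plan is to replay the three-case structure ($i=\metalevel$, then $i=\proglevel$, then $i=\codelevel$; layer $\varlevel$ is not available for this rule) used for the semantic $\Pi$-formation lemma immediately above, now additionally producing the term component of the logical relation. From $\lstypeq i S{S'}{l_1}$ I would extract $\lsgctx\Psi$ and $\lsctxwf{\typeof i}\Gamma$, and from $\lstyequiv[\Psi][\Gamma, x : S \at{l_1}] i t{t'}T{l_2}$ I would extract $\lsctxwf{\typeof i}{\Gamma, x : S \at{l_1}}$ together with the semantic well-formedness of $T$. The type component of the goal, $\lstypwf{\typeof i}{\PI{l_1}{l_2} x S T}{l_1 \sqcup l_2}$, is then immediate from the semantic $\Pi$-formation lemma, so only the term component remains. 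I would also record once and for all $\lsctxeq{\typeof i}{\Gamma, x : S \at{l_1}}{\Gamma, x : S' \at{l_3}}$ via the semantic context-extension congruence lemma, since this is what makes the extended local substitutions related below.

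For the case $i=\metalevel$: after assuming $\tyequiv[L']\phi{\phi'}L$, related global substitutions, and $\ldsubeq[\Phi][\Delta][L']\metalevel\metalevel\delta{\delta'}{\Gamma[\phi][\sigma]}$, the $\Pi$-formation lemma yields a derivation $\D :: \ldtypeq[\Phi][\Delta][L']\metalevel\metalevel{\PI{l_1}{l_2} x S T[\phi][\sigma][\delta]}{\PI{l_3}{l_4} x{S'}{T'}[\phi'][\sigma'][\delta']}{(l_1 \sqcup l_2)[\phi]}$; write $\E,\F$ for its domain and codomain families. I must relate the two substituted $\lambda$-terms under $\D$. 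Since a $\lambda$-term is already a weak head normal form (it reduces only to itself), the reduction clauses are trivial; the generic-equivalence clause follows by the escape lemma applied to the hypotheses together with the congruence law for $\Pi$-typed generic equivalence — i.e. one $\eta$-expands and uses weak head closure to absorb the resulting $\beta$-step. For the functional clause, given a further weakening $\psi$, related arguments $\Ac :: \ldtyequiv[\Phi'][\Delta'][L'']\metalevel\metalevel s{s'}{\E(\psi)}$, and the type-annotation-equivalence assumptions that the definition of the $\Pi$ relation supplies, the application of the substituted $\lambda$ to $s$ weak-head-reduces to $t[\phi][\sigma][\delta, s/x]$ (symmetrically on the primed side), so by weak head expansion it suffices to relate $t[\phi][\sigma][\delta, s/x]$ with $t'[\phi'][\sigma'][\delta', s'/x]$ under the codomain derivation. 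That is precisely an instance of $\lstyequiv[\Psi][\Gamma, x : S \at{l_1}] i t{t'}T{l_2}$ fed the extended related local substitution $(\delta, s/x)\approx(\delta', s'/x)$, whose relatedness follows from the recorded $\lsctxeq{\typeof i}{\Gamma, x : S \at{l_1}}{\Gamma, x : S' \at{l_3}}$ and $\Ac$; a right-irrelevance step then realigns that derivation with $\F(\psi, \Ac)$. The case $i=\proglevel$ is the same argument, now quantified over all $k\in\{\proglevel,\metalevel\}$, so the reasoning is carried out at layer $k$. The case $i=\codelevel$ proceeds as in the neighbouring lemmas: apply the local substitution lemma so that no fresh local substitution need be introduced, reducing the obligation to the layer-$\codelevel$ semantic predicate of \Cref{sec:dt:semglob} that records the $\lambda$ together with its sub-structures, which has already been discharged by the $i=\proglevel$ case modulo converting weakenings into substitutions.

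The main obstacle is the functional clause, and within it the bookkeeping around type annotations: the $\Pi$ logical relation only demands that the annotations appearing in the application form be \emph{generically equivalent} to the ones carried by the $\lambda$, not syntactically equal, so the $\beta$-reduction has to be set up through those annotation-equivalence assumptions (this is exactly where it matters that the reduction is untyped, so that the $\beta$-step discards the application annotations), and afterwards one needs right-irrelevance to move between the codomain derivation produced by the hypothesis on $t,t'$ and the family $\F$ coming from $\D$. Obtaining the generic equivalence between the two $\lambda$-terms via the $\eta$-congruence law, rather than a direct $\lambda$-congruence, is a second, milder, point of care. Once the semantic $\Pi$-formation lemma, the semantic context-congruence lemma, and the escape / weak head closure / irrelevance machinery of \Cref{sec:dt:logrel-prop} and its layer-$\metalevel$ analogue are in place, everything else is routine.
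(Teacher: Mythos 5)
Your proposal is correct and follows essentially the same route as the paper's proof: extract the preliminary semantic judgments, obtain the type component from the semantic $\Pi$-formation lemma, split the term component into the generic-equivalence clause (obtained by escaping the body relation under a variable extension, then using weak head closure to absorb the $\beta$-step inside the $\eta$-style congruence law for $\Pi$) and the functional clause (obtained by extending the related local substitutions with the given related arguments and feeding the hypothesis on $t\approx t'$, then realigning via irrelevance and weak head expansion), and finally handle $i=\proglevel$ by replaying the argument for $k\in\{\proglevel,\metalevel\}$ and $i=\codelevel$ by the local substitution lemma plus the layer-$\codelevel$ predicate. Your observation about why the $\beta$-step fires despite the non-syntactic match of application annotations — because reduction is untyped — is a correct and worthwhile clarification that the paper leaves implicit.
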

\begin{proof}
  From $\lstypeq i S{S'}{l_1}$, we can conclude $\lsgctx \Psi$ and
  $\lsctxwf {\typeof i} \Gamma$. %
  The premise $\lstyequiv[\Psi][\Gamma, x : S \at {l_1}] i t{t'}{T}{l_2}$ also gives
  us $\lstypwf[\Psi][\Gamma, x : S \at{l_1}] iT{l_2}$. %
  Combining symmetry, transitivity and the previous lemma, we have
  \[
    \lstypwf i{\PI{l_1}{l_2} x S T}{l_1 \sqcup l_2}
  \]
  Now assuming $\tyequiv[L']{\phi}{\phi'}L$ and
  $\ldSgsubeq[\Phi][L'] \sigma{\sigma'}{\Psi[\phi]}$, we have to consider all
  $i \ge \codelevel$. %
  \begin{itemize}
  \item[Case $i = \metalevel$] Then assuming
    $\ldsubeq[\Phi][\Delta][L'] \metalevel \metalevel \delta{\delta'}{\Gamma[\phi][\sigma]}$, we must
    show
    \begin{itemize}
    \item
      $\D :: \ldtypeq[\Phi][\Delta][L'] \metalevel \metalevel{\PI{l_1}{l_2} x S
        T[\phi][\sigma][\delta]}{\PI{l_3}{l_4} x{S}{T}[\phi'][\sigma'][\delta']}{(l_1
        \sqcup l_2)[\phi]}$, and
    \item
      $\ldtyequiv[\Phi][\Delta][L'] \metalevel \metalevel{\LAM {l_1}{l_2} x S
        t[\phi][\sigma][\delta]}{\LAM {l_3}{l_4}
        x{S'}{t'}[\phi'][\sigma'][\delta']}\D$.
    \end{itemize}

    To obtain the goal we shall proceed in two steps. %
    First, we obtain
    \[
      \ldsubeq[\Phi][\Delta, x : S[\phi][\sigma][\delta] \at{l_1[\phi]}][L'] \metalevel \metalevel
      {\delta,x/x}{\delta',x/x}{(\Gamma, x : S \at{l_1})[\phi][\sigma]}
    \]. %
    Giving it to $\lstyequiv[\Psi][\Gamma, x : S \at {l_1}] i t{t'}{T}{l_2}$, we
    have
    \[
      \ldtyequivt[\Phi][\Delta, x : S[\phi][\sigma][\delta] \at{l_1[\phi]}][L'] \metalevel \metalevel
      {t[\phi][\sigma][\delta,x/x]}{t'[\phi'][\sigma'][\delta',x/x]}{T[\phi][\sigma][\delta,x/x]}{l_2[\phi]}
    \]
    A further escape gives us
    \[
      \lttrmgeq[\Phi][\Delta, x : S[\phi][\sigma][\delta] \at{l_1[\phi]}][L'] \metalevel
      {t[\phi][\sigma][\delta,x/x]}{t'[\phi'][\sigma'][\delta',x/x]}{T[\phi][\sigma][\delta,x/x]}{l_2[\phi]}
    \]
    From this, we conclude 
    \[
      \lttrmgeq[\Phi][\Delta][L'] \metalevel {(\LAM {l_1}{l_2} x S
        t)[\phi][\sigma][\delta]}{(\LAM {l_3}{l_4}
        x{S'}{t'})[\phi'][\sigma'][\delta']}{T[\phi][\sigma][\delta]}{l_2[\phi]}
    \]
    modulo the law of weak head closure.

    In the second step, we assume
    $\psi :: L'' \sep \Phi'; \Delta' \To_m L' \sep \Phi; \Delta$ and 
    $\ldtyequivt[\Phi'][\Delta'][L''] \metalevel \metalevel{s}{s'}{S[\phi][\sigma][\delta]}{l_1[\phi]}$,
    we should prove
    \[
      \ldtyequivt[\Phi'][\Delta'][L''] \metalevel \metalevel{t[\phi][\sigma][\delta,
        s/x]}{t'[\phi'][\sigma'][\delta',s'/x]}{T[\phi][\sigma][\delta, s/x]}{l_1[\phi]}
    \]
    then we have the goal modulo weak head expansion. %
    We are almost there, as long as we can provide
    \[
      \ldsubeq[\Phi'][\Delta'][L''] \metalevel \metalevel {\delta,s/x}{\delta',s'/x}{(\Gamma, x : S \at{l_1})[\phi][\sigma]}
    \]
    But in the previous lemma, we have seen it obvious.
    
  \item[Case $i = \proglevel$] Repeat the previous case at different layers twice. %
  \item[Case $i = \codelevel$] This case is very similar to that of $\Pi$. %
  \end{itemize}  
\end{proof}

\begin{lemma}
  \begin{mathpar}
    \inferrule
    {\tyequiv[L]{l_1}{l_3}\Level  \\ \tyequiv[L]{l_2}{l_4}\Level \\ \lstypeq i S{S'}{l_1}
      \\
      \lstypeq[\Psi][\Gamma, x : S \at{l_1}] i{T}{T'}{l_2} \\
      \Ac :: \lstyequiv i t{t'}{\PI {l_1}{l_2} x S T}{l_1 \sqcup l_2} \\ \lstyequiv i s{s'} S{l_1}}
    {\lstyequiv i{\APP t {l_1}{l_2} x S T s}{\APP{t'} {l_3}{l_4} x {S'}{T'}{s'}}{T[s/x]}{l_2}}
  \end{mathpar}
\end{lemma}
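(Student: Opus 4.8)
The plan is to follow the recipe of the preceding semantic lemmas — in particular the one just proved for $\LAM$ — and to reduce the goal to the logical-relation behaviour of $\Pi$ types. First I would harvest the structural side-conditions: $\lsgctx\Psi$ and $\lsctxwf{\typeof i}\Gamma$ from $\lstypeq i S{S'}{l_1}$, then $\lstypwf[\Psi][\Gamma, x : S \at{l_1}] i T{l_2}$ from the codomain premise, and — via symmetry, transitivity and the already-proved semantic $\Pi$-congruence lemma — $\lstypwf i{\PI{l_1}{l_2}xST}{l_1 \sqcup l_2}$. For the semantic well-formedness of the result type $T[s/x]$ I would use $\lstyequiv i s{s'}S{l_1}$ together with reflexivity to build a semantic local substitution $\id, s/x$ into $\Gamma, x : S \at{l_1}$, and apply the local substitution lemma to $\lstypeq[\Psi][\Gamma, x : S \at{l_1}] i{T}{T'}{l_2}$, obtaining $\lstypwf i{T[s/x]}{l_2}$.

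I would then case-split on $i$ exactly as in the $\LAM$ lemma: $i=\proglevel$ is the $i=\metalevel$ argument carried out for both $k\in\{\proglevel,\metalevel\}$; $i=\codelevel$ is discharged by the local substitution lemma, which reduces the goal to the semantic structure rule $\ldStypingge[\Phi][\Gamma[\phi][\sigma]][L']\proglevel\proglevel{\APP t{l_1}{l_2}xSTs}{\cdots}$ and hence follows from the $i=\proglevel$ case modulo converting weakenings to substitutions; layer $\varlevel$ is not applicable, since an application is not a variable. For $i=\metalevel$, assume $\tyequiv[L']{\phi}{\phi'}L$, $\ldSgsubeq[\Phi][L']\sigma{\sigma'}{\Psi[\phi]}$ and $\ldsubeq[\Phi][\Delta][L']\metalevel\metalevel\delta{\delta'}{\Gamma[\phi][\sigma]}$. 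Instantiating the premise $\Ac :: \lstyequiv i t{t'}{\PI{l_1}{l_2}xST}{l_1 \sqcup l_2}$ yields a derivation $\D$ at a $\Pi$ type with $t[\phi][\sigma][\delta]\reds w$ and $t'[\phi'][\sigma'][\delta']\reds w'$; since $\Pi$-types are already weak head normal forms, the reduced domain and codomain on the left are literally $S[\phi][\sigma][\delta]$ and $T[\phi][\sigma][\delta]$, while those on the right — call them $S_2,T_2$ — are generically equivalent to them by soundness of reduction and escape.

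Next I would instantiate $\lstyequiv i s{s'}S{l_1}$ with the same data to get $s[\phi][\sigma][\delta]$ and $s'[\phi'][\sigma'][\delta']$ related at a derivation of $\ldtypeq$ between $S[\phi][\sigma][\delta]$ and $S'[\phi'][\sigma'][\delta']$; by right irrelevance this also witnesses relatedness at the $\E$-component of $\D$. Escaping the instantiation of $\lstypeq[\Psi][\Gamma, x : S \at{l_1}] i{T}{T'}{l_2}$ (extending $\delta$ by $x/x$) gives $\lttypgeq \metalevel{T[\phi][\sigma][\delta]}{T'[\phi'][\sigma'][\delta']}{l_2[\phi]}$, and $\lttypgeq \metalevel{S_2}{S'[\phi'][\sigma'][\delta']}{l_1[\phi]}$ follows from $S_2$ being generically equivalent to $S[\phi][\sigma][\delta]$ and $\lttypgeq \metalevel{S[\phi][\sigma][\delta]}{S'[\phi'][\sigma'][\delta']}{l_1[\phi]}$. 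Feeding the related arguments and these equivalences — taking the roles of the free annotations allowed in the $\Pi$-relation's application clause, with the annotations set to $S[\phi][\sigma][\delta],T[\phi][\sigma][\delta]$ resp. $S'[\phi'][\sigma'][\delta'],T'[\phi'][\sigma'][\delta']$ and the identity weakening — into that clause yields relatedness of $\APP w{l_1}{l_2}xSTs$ and $\APP{w'}{l_3}{l_4}x{S'}{T'}{s'}$ (carrying the evident substitutions) at the appropriate instance $\F(\psi)$ of $\D$'s codomain component. The congruence reduction rule for applications gives $\APP t{l_1}{l_2}xSTs[\phi][\sigma][\delta]\reds\APP w{l_1}{l_2}x{S[\phi][\sigma][\delta]}{T[\phi][\sigma][\delta]}{s[\phi][\sigma][\delta]}$ and likewise on the primed side, so weak head expansion transfers the relation back to the original applications; the output type delivered is $T[\phi][\sigma][\delta][s[\phi][\sigma][\delta]/x] = (T[s/x])[\phi][\sigma][\delta]$ by naturality of substitutions, which is exactly the type demanded by $\lstyequiv i{\APP t{l_1}{l_2}xSTs}{\APP{t'}{l_3}{l_4}x{S'}{T'}{s'}}{T[s/x]}{l_2}$. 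This closes the $i=\metalevel$ case.

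The main obstacle is the bookkeeping around the type annotations and the output type. The $\Pi$-relation's term clause deliberately decouples the annotations of the applied form from the reduced $\Pi$-type's components, so one must thread the syntactic and generic equivalences $S_2\approx S[\phi][\sigma][\delta]$ etc. (from reduction soundness, escape and transitivity) and the irrelevance lemmas carefully, in order to land the application with precisely the annotations $S,T,S',T'$ that appear in the goal, and to realign the output-type derivation with the one expected for $T[s/x]$. This is exactly the manoeuvre already used in the $\LAM$ lemma and in the function cases of the irrelevance and transitivity lemmas of \Cref{sec:dt:logrel-prop}; it is the one place where a universe-level or annotation mismatch could sneak in, but it is routine given those earlier results.
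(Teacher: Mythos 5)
Your proposal is correct and follows essentially the same route as the paper: harvest $\lsgctx\Psi$, $\lsctxwf{\typeof i}\Gamma$, and $\lstypwf{\typeof i}{T[s/x]}{l_2}$ via the semantic local-substitution lemma, then case-split on $i$ as in the $\LAM$ lemma, instantiate $\Ac$ to land at the $\Pi$-type logical relation, and read off the application clause up to irrelevance and weak head expansion. Where the paper compresses the final step into "the semantics of $\PI{l_1}{l_2}xST[\phi][\sigma][\delta]$ gives us the goal, up to irrelevance," you unpack exactly what that irrelevance must discharge — the free type annotations in the $\Pi$-relation's application clause, the generic equivalences needed to populate them, the congruence reduction step, and the naturality identity realigning the output type — which is faithful to the proof schema the paper actually uses in the adjacent $\LAM$, irrelevance and transitivity lemmas.
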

\begin{proof}
  From $\lstypeq i S{S'}{l_1}$, we can conclude $\lsgctx \Psi$ and
  $\lsctxwf {\typeof i} \Gamma$. %
  To show\break
  $\lstypwf{\typeof i}{T[s/x]}{l_2}$, we first show $\lssubst{\typeof i}{s/x}{\Gamma,x
    : S \at{l_1}}$ and get the goal using the local substitution lemma. %
  This is immediate by the semantic rule for local substitutions and
  \Cref{lem:dt:sem:lwk}. %
  
  Now assuming $\tyequiv[L']{\phi}{\phi'}L$ and
  $\ldSgsubeq[\Phi][L'] \sigma{\sigma'}{\Psi[\phi]}$, we have to consider all
  $i \ge \codelevel$. %
  \begin{itemize}
  \item[Case $i = \metalevel$] Then assuming
    $\ldsubeq[\Phi][\Delta][L'] \metalevel \metalevel \delta{\delta'}{\Gamma[\phi][\sigma]}$, we must
    show
    \begin{itemize}
    \item
      $\D :: \ldtypeq[\Phi][\Delta][L'] \metalevel \metalevel{T[\phi][\sigma][\delta,s[\phi][\sigma][\delta]/x]}{T'[\phi'][\sigma'][\delta',s'[\phi'][\sigma'][\delta']/x]}{l_2[\phi]}$, and
    \item
      $\ldtyequiv[\Phi][\Delta][L'] \metalevel \metalevel{{\APP t {l_1}{l_2} x S T
          s}[\phi][\sigma][\delta]}{{\APP{t'} {l_3}{l_4} x
          {S'}{T'}{s'}}[\phi'][\sigma'][\delta']}\D$.
    \end{itemize}
    $\D$ is easily concluded from $\lstypwf \metalevel {T[s/x]}{l_2}$. %

    We obtain the goal by instantiating $\Ac$, from which we get
    \[
      \ldtyequiv[\Phi][\Delta][L'] \metalevel \metalevel{t[\phi][\sigma][\delta]}{t'[\phi'][\sigma'][\delta']}{\PI {l_1}{l_2} x S T[\phi][\sigma][\delta]}{(l_1 \sqcup l_2)[\phi]}
    \]
    The semantics of ${\PI {l_1}{l_2} x S T[\phi][\sigma][\delta]}$ gives us the
    goal, up to irrelevance. 
  \item[Case $i = \proglevel$] Similar.
  \item[Case $i = \codelevel$] Follow the previous pattern, we use the local substitution
    lemma. 
  \end{itemize}
\end{proof}

\begin{lemma}
  \begin{mathpar}
    \inferrule
    {\typing[L]l\Level \\ \typing[L]{l'}\Level \\ 
      \lstyping[\Psi][\Gamma, x : S \at{l}] i t{T}{l'} \\ \lstyping i s{S}{l}}
    {\lstyequiv i{t[s/x]}{\APP {\LAM {l}{l'} x S t} {l}{l'} x{S} T s}{T[s/x]}{l'}}
  \end{mathpar}
\end{lemma}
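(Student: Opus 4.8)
The plan is to reduce this $\beta$ rule to two ingredients that are available by the time the fundamental theorem is proved: the semantic well-typedness (hence reflexive semantic equivalence) of the contractum $t[s/x]$ at type $T[s/x]$, and the single weak-head step $\APP{\LAM{l}{l'}xSt}{l}{l'}x{S}Ts \redd t[s/x]$. The Reduction Expansion lemma then glues these together. Because this rule carries the side condition $\compt i$, we have $i \in \{\proglevel, \metalevel\}$, which is exactly the range in which the Reduction Expansion lemma applies; consequently the whole argument can be carried out at the level of the semantic judgments, without unfolding to the Kripke logical relations and without an explicit split on $i = \proglevel$ versus $i = \metalevel$, in contrast to the more primitive semantic rules proved just above.

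First I would establish the semantic local substitution $\lssubst i{\id_{\Gamma}, s/x}{\Gamma, x : S \at l}$. From \Cref{lem:dt:sem:lwk}, instantiated with an empty trailing context, we get $\lssubst i{\id_{\Gamma}}{\Gamma}$. From the hypothesis $\lstyping i s S l$, unfolding the definition of the semantic typing judgment, we obtain $\lsgctx \Psi$, $\lsctxwf{\typeof i}\Gamma$, $\lstypwf{\typeof i} S l$, and the reflexive instance $\lstyequiv i s s S l$. Feeding $\lssubst i{\id_{\Gamma}}{\Gamma}$, $\lstypwf{\typeof i} S l$ and $\lstyequiv i s s S l$ into the semantic cons rule for local substitutions, together with the local-identity algebra $S[\id_{\Gamma}] = S$, yields $\lssubst i{\id_{\Gamma}, s/x}{\Gamma, x : S \at l}$.

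Next I would apply the semantic Local Substitutions lemma to $\lstyping[\Psi][\Gamma, x : S \at l] i t T{l'}$ along the substitution just constructed, obtaining $\lstyping i{t[\id_{\Gamma}, s/x]}{T[\id_{\Gamma}, s/x]}{l'}$, i.e. $\lstyping i{t[s/x]}{T[s/x]}{l'}$ once the local-identity algebra is used again on the term and the type; by definition this is the reflexive instance $\lstyequiv i{t[s/x]}{t[s/x]}{T[s/x]}{l'}$. Separately, escaping the hypotheses to their syntactic counterparts (via the escape and subsumption lemmas) and adding the presuppositions $\lttypwf i S l$ and $\lttypwf[\Psi][\Gamma, x : S \at l] i T{l'}$, the syntactic rules for $\lambda$-abstraction and application give $\lttyping i{\APP{\LAM{l}{l'}xSt}{l}{l'}x{S}Ts}{T[s/x]}{l'}$. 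Combined with the $\beta$-reduction rule $\APP{\LAM{l}{l'}xSt}{l}{l'}x{S}Ts \redd t[s/x]$, this is a one-step reduction judgment with a well-typed subject, so the Reduction Expansion lemma upgrades $\lstyequiv i{t[s/x]}{t[s/x]}{T[s/x]}{l'}$ to $\lstyequiv i{t[s/x]}{\APP{\LAM{l}{l'}xSt}{l}{l'}x{S}Ts}{T[s/x]}{l'}$, which is the goal.

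The main obstacle is bureaucratic rather than conceptual: one must be careful that the layer indices line up when forming $\id_{\Gamma}, s/x$ as a semantic local substitution (the cons rule wants $\lstyping i s{S[\id_{\Gamma}]}l$, so the local-identity algebra is invoked there, and again to identify $t[\id_{\Gamma}, s/x]$ with $t[s/x]$ and $T[\id_{\Gamma}, s/x]$ with $T[s/x]$), and that the semantic typing judgment $\lstyping i s S l$ genuinely exposes $\lstypwf{\typeof i} S l$ as needed. Since the rule sits downstream of both the semantic Local Substitutions lemma and the Reduction Expansion lemma, there is no need to descend into the logical relations, which keeps the argument short.
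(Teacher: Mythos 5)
Your proof is correct and follows essentially the same route as the paper's proof: build the semantic local substitution $\lssubst i{\id_{\Gamma}, s/x}{\Gamma, x : S \at l}$ from \Cref{lem:dt:sem:lwk} and the cons rule, use the semantic Local Substitutions lemma to get $\lstyping i{t[s/x]}{T[s/x]}{l'}$, and then close with the Reduction Expansion lemma using the single $\beta$-step; the paper's proof is terser (and has a typo where it writes $T$ for $T[s/x]$), but the decomposition is identical, so nothing to flag here.
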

\begin{proof}
  Here $i \in \{\proglevel, \metalevel\}$. %
  We obtain $\lsgctx \Psi$, $\lsctxwf {\typeof i} \Gamma$, and
  $\lstypwf{\typeof i}{T[s/x]}{l'}$ following similar lines as the previous lemma. %

  From the local substitution lemma, we also have $\lstyping i{t[s/x]}{T}{l'}$ so we
  are one reduction step away, which can be concluded by the reduction expansion lemma. %
\end{proof}

\begin{lemma}
  \begin{mathpar}
    \inferrule
    {\typing[L]l\Level \\ \typing[L]{l'}\Level \\ \lstypwf i S{l} \\
      \lstypwf[\Psi][\Gamma, x : S \at{l}] i{T}{l'} \\
      \lstyping i t{\PI {l}{l'}x S T}{l \sqcup l'}}
    {\lstyequiv i{\LAM {l}{l'} x{S} {\APP t {l}{l'}x S T x}}{t}{\PI {l}{l'} x S T}{l \sqcup l'}}
  \end{mathpar}
\end{lemma}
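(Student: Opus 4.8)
The plan is to follow the template of the neighbouring semantic lemmas — in particular the $\beta$-rule for dependent functions just proved — reasoning ``backwards'' through the layers. From the semantic hypotheses I first extract $\lsgctx\Psi$ and $\lsctxwf{\typeof i}\Gamma$, and from the semantic $\Pi$-congruence lemma applied to the reflexive instances of $\lstypwf i S l$ and $\lstypwf[\Psi][\Gamma,x:S\at{l}] i T{l'}$ I obtain $\lstypwf{\typeof i}{\PI{l}{l'}xST}{l\sqcup l'}$; together with $\lstyping i t{\PI{l}{l'}xST}{l\sqcup l'}$ these discharge the well-formedness side conditions of the goal $\lstyequiv i{\LAM{l}{l'}x{S}{\APP t{l}{l'}xSTx}}{t}{\PI{l}{l'}xST}{l\sqcup l'}$. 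Because $\compt i$ is a side condition of the $\eta$-rule, only $i\in\{\proglevel,\metalevel\}$ need be considered (so $\typeof i=i$), and the remaining case split is over $k\ge i$, exactly as in the other computation lemmas, producing the three index pairs $(k,j)=(\metalevel,\metalevel)$, $(\proglevel,\proglevel)$, and $(\metalevel,\proglevel)$.

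So fix $\tyequiv[L']\phi{\phi'}L$, $\ldSgsubeq[\Phi][L']\sigma{\sigma'}{\Psi[\phi]}$, $k\ge i$, and $\ldsubeq[\Phi][\Delta][L']k{\typeof i}\delta{\delta'}{\Gamma[\phi][\sigma]}$. By the substitution-algebra lemmas the composite action $[\phi][\sigma][\delta]$ pushes into $\LAM{l}{l'}x{S}{\APP t{l}{l'}xSTx}$ and produces a $\lambda$-abstraction with all subterms substituted; in particular the left-hand side of the goal is already a weak head normal form. From $\lstyping i t{\PI{l}{l'}xST}{l\sqcup l'}$ — i.e. the reflexive semantic equivalence of $t$ with itself — the Escape lemma together with Right/Left Irrelevance gives a $\Pi$-derivation $\D$ relating the substituted $\Pi$-type to itself such that the substituted forms of $t$ on the two sides are related by $\D$. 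Unfolding the $\Pi$-case of that relation yields: the substituted $t$ weak-head-reduces to $w$ on one side and to $w'$ on the other; $w$ and $w'$ are generically equivalent; and the application clause supplies, for any weakening $\psi$, any reducible argument pair recorded in an $\Ac$, and any compatible replacement annotations, that the application of $w$ to $s$ and the application of $w'$ to $s'$ are related by the codomain family $\F$ of $\D$.

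It then remains to verify the $\Pi$-case of the logical relation for the pair consisting of the substituted $\lambda$-abstraction and the substituted $t$. The left side reduces to itself; the right reduces to $w'$; the generic equivalence between them follows from the ($\eta$-style) Congruence law of the generic equivalence, since, in the context extended by $x$, applying the $\lambda$-abstraction to $x$ single-step $\beta$-reduces to the substituted application of $t$ to $x$, which reduces on to the application of $w$ to $x$, generically equivalent to the application of $w'$ to $x$ by escaping the application clause of $\D$ instantiated at the local weakening into the $x$-extended context, the $x$-extended reducible local substitution, and the reducible neutral $x$ (Reflexivity of Neutral). For the extensional clause, given $\psi$, a reducible pair $s,s'$, and compatible annotations, applying the $\lambda$-abstraction to $s$ $\beta$-reduces to the substituted application of $t$ to $s$, so Weak Head Expansion applied to the application clause of $\D$ at $s,s'$ delivers the required relatedness at $\F$, after matching derivations by Irrelevance. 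The same argument runs uniformly for the index pairs $(\proglevel,\proglevel)$ and $(\metalevel,\proglevel)$; the $\codelevel$-style book-keeping of the semantics does not arise here because $\compt i$ excludes $i\in\{\codelevel,\varlevel\}$.

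I expect the main obstacle to be the annotation book-keeping inside the $\Pi$-case of the logical relation: that clause quantifies over domain and codomain annotations that need only be generically equivalent to the true components $S[\phi][\sigma][\delta]$ and $T[\phi][\sigma][\delta,x/x]$, whereas the $\beta$-reduction rule for applying a $\lambda$-abstraction, as stated, ties the binder annotation to the annotation appearing on the application. The remedy is either to note that the $\lambda$-abstraction produced by the substituted $\eta$-expansion carries exactly the components the clause feeds back (so the match is automatic there), or to first rewrite the mismatched annotations using the supplied equivalences — via reduction-congruence together with the Congruence-for-Neutrals law — before firing $\beta$; in either case the remaining reasoning is routine but tedious. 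A secondary, purely administrative obstacle is keeping the two substitution layers ($[\phi][\sigma]$ versus the local $[\delta]$) and the three index pairs straight, which, as in the neighbouring lemmas, reduces to careful use of the substitution-commutation equations and Irrelevance.
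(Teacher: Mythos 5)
Your proof is correct and follows essentially the same approach as the paper's: reduce $\APP{\LAM{l}{l'}x{S}{(\APP{t}{l}{l'}xSTx)}}{l}{l'}xSTs$ by a single $\beta$-step to the substituted application of $t$ to $s$, relate that to the corresponding application on the primed side using the $\Pi$-clause supplied by $\lstyping i t{\PI{l}{l'}xST}{l\sqcup l'}$, and conclude by weak head expansion. The paper's own proof is substantially terser -- it simply assumes $\phi\approx\phi'$, $\sigma\approx\sigma'$, $\delta\approx\delta'$, $s\approx s'$ and traces the one-step reduction -- so the bulk of what you write (the explicit split over $(k,j)$, and the generic-equivalence check on the WHNFs via the $\eta$-style Congruence law) is elaboration of steps the paper leaves implicit, and it is accurate.

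One remark on the obstacle you flag. The annotation mismatch you identify -- the $\Pi$-clause quantifies over annotations $S_1',T_1'$ that are only generically equivalent to the ones carried by the $\lambda$, while the untyped $\beta$-rule as written uses the same metavariable $S$ on the $\LAM$ and on the $\APP$ -- is a real tension in the development, and the paper's proof silently specialises to the case where the annotations coincide. However, your second proposed remedy (rewriting the application's annotation using ``Congruence for Neutrals'') does not apply: the head of the stuck application is a $\LAM$, which is a weak head normal form, not a neutral, so that law is not in scope. The clean fix is to read the $\beta$-reduction rule as matching only on the shape of the head (so the application's $S,T$ annotations are free parameters of the rule); under that reading, which is what the rest of the paper's metatheory seems to assume, the obstacle disappears and both your proof and the paper's go through as stated.
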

\begin{proof}
  Here $i \in \{\proglevel, \metalevel\}$. %
  We assume $\phi \approx \phi'$, $\sigma \approx \sigma'$ and $\delta \approx
  \delta'$, and finally $s \approx s'$. Then see
  \begin{align*}
    &~ \APP {\LAM {l}{l'} x{S} {\APP{t[\phi][\sigma][\delta]} {l}{l'}x S T x}} {l}{l'}x S
      T s \\
    \redd&~ {\APP {t[\phi][\sigma][\delta]}
           {l}{l'}x S T x}[s/x] \\
    =&~ {\APP {t[\phi][\sigma][\delta]} {l}{l'}x S T s} \\
    \approx &~ {\APP {t[\phi'][\sigma'][\delta']} {l}{l'}x S T{s'}}
  \end{align*}
  Therefore, we obtain the goal by weak head expansion of logical relations. 
\end{proof}

\begin{lemma}
  If
  \begin{itemize}
  \item $(k, i) \in \{(\proglevel, \proglevel), (\metalevel, \proglevel), (\metalevel, \metalevel)\}$,
  \item $\tyequiv[L]l{l'}\Level$,
  \item $\lstypeq[\Psi][\Gamma, x : \Nat \at 0] i M{M'} l$,
  \item $\lstyequiv i {s_1}{s_3} {M[\ze/x]}l$, 
  \item $\Bc :: \lstyequiv[\Psi][\Gamma, x : \Nat \at 0, y : M \at l] i{s_2}{s_4}{M[\su
      x/x]}l$,
  \item $\tyequiv[L']{\phi}{\phi'}L$,
  \item $\ldSgsubeq[\Phi][L'] \sigma{\sigma'}{\Psi[\phi]}$,
  \item $\ldsubeq[\Phi][\Delta][L'] k{\typeof i} \delta{\delta'}{\Gamma[\phi][\sigma]}$,
  \item $\ttrmreds[\Phi][\Delta][L'] k t w \Nat \ze$,
  \item $\ttrmreds[\Phi][\Delta][L'] k{t'}{w'}\Nat \ze$,
  \item $\lttrmgeq[\Phi][\Delta][L'] k w{w'}\Nat\ze$, 
  \item $\Ac :: \ldtynfeqnat[\Phi][\Delta][L'] k w{w'}$,
  \item
    $t_1 = {\ELIMN
      l{x.M[\phi][\sigma][\delta,x/x]}{(s_1[\phi][\sigma][\delta])}{x,y. s_2[\phi][\sigma][\delta,x/x,y/y]}{t}}$,
  \item $t_2 = {\ELIMN
      l{x.M'[\phi'][\sigma'][\delta',x/x]}{(s_3[\phi'][\sigma'][\delta'])}{x,y. s_4[\phi'][\sigma'][\delta',x/x,y/y]}{t'}}$
  \end{itemize}
  then
  \[
    \ldtyequivt[\Phi][\Delta][L'] k{\typeof i}{t_1}{t_2}{M[\phi][\sigma][\delta, t/x]}{l[\phi]}
  \]
\end{lemma}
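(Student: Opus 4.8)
The plan is to induct on the derivation $\Ac :: \ldtynfeqnat[\Phi][\Delta][L'] k w{w'}$, which splits into the three cases $w=w'=\ze$, $w=\su v$ with an inner derivation $\ldtyeqnat[\Phi][\Delta][L'] k v{v'}$, and $w,w'$ both neutral. Throughout, abbreviate $\bar M := M[\phi][\sigma][\delta,x/x]$, $\bar s_1 := s_1[\phi][\sigma][\delta]$, $\bar s_2 := s_2[\phi][\sigma][\delta,x/x,y/y]$ (and the primed versions from the primed data), so that $t_1 = \ELIMN l{x.\bar M}{\bar s_1}{x,y.\bar s_2}{t}$. The common preprocessing in every case is the same: applying the escape lemma to the semantic hypotheses on $M,s_1,s_2$ (instantiated at $\phi,\phi',\sigma,\sigma',\delta,\delta'$) supplies the generic equivalences and syntactic well-formedness needed below; and since $t\reds w$, $t'\reds w'$, the congruence reduction rules for $\ELIMN$ give $\ttrmreds[\Phi][\Delta][L'] k{t_1}{(\ELIMN l{x.\bar M}{\bar s_1}{x,y.\bar s_2}{w})}{M[\phi][\sigma][\delta,t/x]}{l[\phi]}$ and likewise for $t_2$, so by weak head expansion of the logical relation it suffices to relate these reducts. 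Moreover, instantiating the semantic well-formedness hypothesis for $M$ at the related substitutions $(\delta,t/x)\approx(\delta,w/x)$ — valid because $t\reds w$ makes $t,w$ related reducible $\Nat$ terms and $\delta\approx\delta$ by PER — shows $M[\phi][\sigma][\delta,t/x]$ and $M[\phi][\sigma][\delta,w/x]$ are related, so by irrelevance I may replace the former by the latter as the target type.

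In the $\ze$ case the $\beta$-rule for $\ELIMN$ on $\ze$ reduces the reducts to $\bar s_1$ and $\bar s_1'$; by naturality of substitutions $M[\phi][\sigma][\delta,\ze/x] = M[\ze/x][\phi][\sigma][\delta]$, and instantiating $\lstyequiv i{s_1}{s_3}{M[\ze/x]}l$ at $\phi,\phi',\sigma,\sigma',\delta,\delta'$ yields exactly $\ldtyequivt[\Phi][\Delta][L'] k{\typeof i}{\bar s_1}{\bar s_1'}{M[\ze/x][\phi][\sigma][\delta]}{l[\phi]}$, and one more weak head expansion closes the case. In the neutral case, with $\lttrmgneeq[\Phi][\Delta][L'] k w{w'}\Nat 0$, the reducts are themselves neutral $\ELIMN$ terms; the ``congruence for neutrals'' law of the generic equivalence (its $\ELIMN$ clause), fed with $\Ac$ and the escaped equivalences of $\bar M,\bar s_1,\bar s_2$, yields $\lttrmgneeq[\Phi][\Delta][L'] k{(\ELIMN l{x.\bar M}{\bar s_1}{x,y.\bar s_2}{w})}{(\ELIMN l{x.\bar M'}{\bar s_1'}{x,y.\bar s_2'}{w'})}{M[\phi][\sigma][\delta,w/x]}{l[\phi]}$, and the reflexivity-of-neutral lemma promotes this into the logical relation at $M[\phi][\sigma][\delta,w/x]$, which is the goal.

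The inductive step $w=\su v$, $w'=\su v'$ will be the main obstacle. Unfolding $\ldtyeqnat[\Phi][\Delta][L'] k v{v'}$ gives $v\reds w_1$, $v'\reds w_1'$, a $\Nat$-equivalence of $w_1,w_1'$, and a structurally smaller $\Ac' :: \ldtynfeqnat[\Phi][\Delta][L'] k{w_1}{w_1'}$. Applying the induction hypothesis to $\Ac'$ with $v,v'$ (reducing to $w_1,w_1'$) in place of $t,t',w,w'$, all other data unchanged, produces recursive results $r := \ELIMN l{x.\bar M}{\bar s_1}{x,y.\bar s_2}{v}$ and $r' := \ELIMN l{x.\bar M'}{\bar s_1'}{x,y.\bar s_2'}{v'}$ together with $\ldtyequivt[\Phi][\Delta][L'] k{\typeof i}{r}{r'}{M[\phi][\sigma][\delta,v/x]}{l[\phi]}$. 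I would then assemble, using the semantic rule for extending a local substitution twice, the relation $(\delta,v/x,r/y)\approx(\delta',v'/x,r'/y)$ against $(\Gamma,x:\Nat\at 0,y:M\at l)[\phi][\sigma]$: the first extension needs $v\approx v'$ as reducible $\Nat$ terms (available), the second needs $r\approx r'$ at $M[\phi][\sigma][\delta,v/x]$ (the output of the induction hypothesis, up to irrelevance). Feeding this into $\Bc$, the semantic step hypothesis $\lstyequiv[\Psi][\Gamma, x : \Nat \at 0, y : M \at l] i{s_2}{s_4}{M[\su x/x]}l$, gives $\bar s_2[v/x,r/y]\approx \bar s_2'[v'/x,r'/y]$ at $M[\su x/x][\phi][\sigma][\delta,v/x]$, which by naturality equals $M[\phi][\sigma][\delta,\su v/x] = M[\phi][\sigma][\delta,w/x]$; the $\beta$-rule for $\ELIMN$ on $\su v$ identifies these with the reducts of $t_1,t_2$, and weak head expansion finishes. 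The two hazards I expect are (a) getting the motive naturality identities right — $M[\su x/x]$ composed with $(\delta,v/x)$ versus $M$ composed with $(\delta,\su v/x)$ — and (b) the layering: since $(k,i)$ ranges over $(\proglevel,\proglevel)$, $(\metalevel,\proglevel)$, $(\metalevel,\metalevel)$, when $k=\metalevel$ and $i=\proglevel$ one must invoke the layering-restriction lemma so that the layer-$\proglevel$ data (the motive, the recursive results) may be viewed at layer $\metalevel$; because the $\Nat$ fragment of the logical relation is layer-invariant, no genuine lifting obligation should arise beyond this.
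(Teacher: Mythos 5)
Your proposal matches the paper's own proof strategy: induction on $\Ac$, the same three cases ($\ze$, $\su$, neutral), using escape and weak head expansion as preprocessing, the hypothesis on $s_1 \approx s_3$ in the $\ze$ case, the IH combined with $\Bc$ and an extended local substitution in the successor case, and reflexivity of neutrals for the neutral case — your version is just spelled out in more detail. Your worry (b) about the layering-restriction lemma is in fact unnecessary (the semantic judgments $\lstypeq$ and $\lstyequiv$ already quantify over all $k \ge i$ at the fixed $j = \typeof i$, and the $\Nat$ fragment carries no $j$-index at all), but you essentially conclude this yourself, so it is not a gap.
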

\begin{proof}
  We do induction on $\Ac$.
  \begin{itemize}
  \item If $w = w' = \ze$, then we hit the base case. In this case, we use $\lstyequiv
    i {s_1}{s_3} {M[\ze/x]}l$ and reduction expansion to almost obtain the goal. %
    The only missing piece is to prove
    \[
      M[\phi][\sigma][\delta, t/x] \approx M[\phi][\sigma][\delta, \ze/x]
    \]
    This holds from symmetry and the fact that $t \approx \ze$, so that we can extend
    $\delta \approx \delta$.
  \item If $w = \su s$ and $w' = \su s'$, then we apply IH and also $\Bc$ to obtain
    the relation between recursive calls for $s$ and $s'$. %
    We perform a similar analysis to handle the types.
  \item If $w = \varlevel$ and $w = \varlevel'$ for some neutrals, then we relate them using the
    reflexivity of neutral.
  \end{itemize}
\end{proof}

\begin{lemma}
  \begin{mathpar}
    \inferrule
    {\tyequiv[L]l{l'}\Level \\ \lstypeq[\Psi][\Gamma, x : \Nat \at 0] i M{M'} l \\
      \lstyequiv i {s_1}{s_3} {M[\ze/x]}l \\
      \lstyequiv[\Psi][\Gamma, x : \Nat \at 0, y : M \at l] i{s_2}{s_4}{M[\su x/x]}l \\
      \Ac :: \lstyequiv i t{t'} \Nat \ze}
    {\lstyequiv i{\ELIMN l{x.M}{s_1}{x,y. s_2}t}{\ELIMN{l'}{x.M'}{s_3}{x,y. s_4}{t'}}{M[t/x]}{l}}
  \end{mathpar}
\end{lemma}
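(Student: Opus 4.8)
The plan is to follow the same pattern as the preceding semantic lemmas and reduce this rule to the helper lemma stated immediately above, which handles a scrutinee already reduced to weak head normal form. First I would unpack the semantic judgments: from any sub-premise, e.g.\ $\Ac :: \lstyequiv i t{t'}\Nat\ze$, we obtain $\lsgctx\Psi$ and $\lsctxwf{\typeof i}\Gamma$. To discharge the well-formedness side condition $\lstypwf{\typeof i}{M[t/x]}{l}$ built into the semantic equivalence judgment for terms, I would first build $\lssubst{\typeof i}{\id_\Gamma, t/x}{\Gamma, x : \Nat \at 0}$ from the semantic rule for extended local substitutions (using $\Ac$ for the $t$ slot and \Cref{lem:dt:sem:lwk} for the identity part) and then apply the local substitution lemma to $\lstypwf[\Psi][\Gamma, x : \Nat \at 0]{\typeof i} M l$, which is a consequence of $\lstypeq[\Psi][\Gamma, x : \Nat \at 0] i M{M'} l$.

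Next, assuming $\tyequiv[L']{\phi}{\phi'}L$ and $\ldSgsubeq[\Phi][L']\sigma{\sigma'}{\Psi[\phi]}$, I would case-analyze on the layers $i \ge \codelevel$, exactly as in the earlier proofs. In the cases $i = \metalevel$ and $i = \proglevel$, after introducing $\ldsubeq[\Phi][\Delta][L'] k{\typeof i}\delta{\delta'}{\Gamma[\phi][\sigma]}$ with $k \ge i$, I would escape $\Ac$ to obtain weak head normal forms $w$, $w'$ with $\ttrmreds[\Phi][\Delta][L'] k{t[\phi][\sigma][\delta]}w\Nat\ze$, the symmetric reduction for $t'$, the generic equivalence $\lttrmgeq[\Phi][\Delta][L'] k w{w'}\Nat\ze$, and $\ldtynfeqnat[\Phi][\Delta][L'] k w{w'}$. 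I would then feed these, together with the escaped semantic premises for $M \approx M'$, $s_1 \approx s_3$, and $s_2 \approx s_4$, and the appropriate instantiation of $(k, i) \in \{(\proglevel, \proglevel), (\metalevel, \proglevel), (\metalevel, \metalevel)\}$, into the preceding helper lemma; it outputs precisely the logical relation for the two substituted eliminators, at type $M[\phi][\sigma][\delta, t[\phi][\sigma][\delta]/x]$ and level $l[\phi]$. It then remains to identify this type with $M[t/x][\phi][\sigma][\delta]$, which follows by naturality of substitutions once we note that $\delta, t[\phi][\sigma][\delta]/x \approx \delta', t'[\phi'][\sigma'][\delta']/x$ is a related pair in $(\Gamma, x : \Nat \at 0)[\phi][\sigma]$, using $\lsctxeq{\typeof i}{\Gamma, x : \Nat \at 0}{\Gamma, x : \Nat \at 0}$ and $\Ac$. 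In the $i = \codelevel$ case I would, as in the earlier lemmas, invoke the local substitution lemma to drop the extra local substitution and then reuse the $i = \proglevel$ argument (converting weakenings to substitutions) to obtain the required logical-relation-stability judgment at layer $\proglevel$.

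The main obstacle is the helper lemma itself, whose proof goes by induction on $\ldtynfeqnat k w{w'}$: the $\ze$ case needs reduction expansion together with the alignment $M[\phi][\sigma][\delta, t/x] \approx M[\phi][\sigma][\delta, \ze/x]$ coming from $t \approx \ze$; the $\su s$ case needs the induction hypothesis to relate the recursive subcalls plus an application of the premise $s_2 \approx s_4$ along the extended local substitution; and the neutral case needs the reflexivity-of-neutrals lemma and the congruence-for-neutrals law of the generic equivalence. Granting that lemma, the statement above is essentially bookkeeping; the one delicate point is keeping the escaped reduction facts for $t$ and $t'$ stated at the \emph{same} layer $k$ that the helper lemma expects, and threading the layer constraints $k \ge i$ and $(k,i)$ correctly through the three layer cases.
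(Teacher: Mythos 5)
Your proposal is correct and takes essentially the same route as the paper: instantiate the premise $\Ac :: \lstyequiv i t{t'}\Nat\ze$ to obtain the weak head reductions, generic equivalence, and $\ldtynfeqnat$ data needed by the preceding helper lemma, apply that lemma for the three admissible $(k,i)$ pairs, and for $i = \codelevel$ reuse the $\proglevel$ argument. The paper states all this in three sentences; your version just spells out the bookkeeping (the well-formedness of $M[t/x]$ via the extended local substitution, the type-alignment step at the end) that the paper leaves implicit.
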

\begin{proof}
  Instantiate $\Ac$ and use the previous lemma for $i \in \{\proglevel, \metalevel\}$. %
  Notice that we know $t$ and $t'$ are related by $\Nat$ which must reduce to itself
  so we can supply all premises required by the previous lemma. %
  When $i = \codelevel$, we reuse the proof when $i = \proglevel$. 
\end{proof}

\begin{lemma}
  \begin{mathpar}
    \inferrule
    {\typing[L]l\Level \\ \lstypwf[\Psi][\Gamma, x : \Nat \at 0] i M l \\
      \lstyping i s {M[\ze/x]}l \\
      \lstyping[\Psi][\Gamma, x : \Nat \at 0, y : M \at l] i {s'}{M[\su x/x]}l}
    {\lstyequiv i{s}{\ELIMN l{x.M}s{x,y. s'}\ze}{M[\ze/x]}{l}}

    \inferrule
    {\typing[L]l\Level \\ \lstypwf[\Psi][\Gamma, x : \Nat \at 0] i M l \\
      \lstyping i s {M[\ze/x]}l \\
      \lstyping[\Psi][\Gamma, x : \Nat \at 0, y : M \at l] i {s'}{M[\su x/x]}l \\
      \lstyping i t \Nat \ze}
    {\lstyequiv i{s'[t/x,\ELIMN l{x.M}s{x,y. s'}t/y]}{\ELIMN l{x.M}s{x,y. s'}{(\su t)}}{M[\su t/x]}{l}}
  \end{mathpar}
\end{lemma}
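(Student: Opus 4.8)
The plan is to derive both equivalences from the Reduction Expansion lemma for the semantic judgments, in exactly the style used for the preceding $\beta$-rules of dependent functions. Both statements have the shape $\lstyequiv i u v T l$ where $v \redd u$ in a single step and $u$ is already semantically well-typed at $T$ and $l$; so for each I would (1) extract $\lsgctx\Psi$ and $\lsctxwf{\typeof i}\Gamma$ from the premises and obtain the semantic well-formedness of the result type; (2) produce $\lstyping i u T l$; (3) recover the syntactic premises from the semantic ones by escape, subsumption and presupposition, apply the typing rule for $\telimn$ to get a syntactic typing of $v$ at $T$, and observe $v \redd u$, which together give $\ttrmred i v u T l$; and (4) apply Reduction Expansion to $\lstyping i u T l$ and this $\ttrmred i v u T l$. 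Since both computation rules carry $\compt i$, we only ever have $i \in \{\proglevel, \metalevel\}$ (so $\typeof i = i$), which is precisely the hypothesis of Reduction Expansion, so there is no layer-by-layer case split to perform — the quantification over $k \ge i$ is already internalized by that lemma.

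For the first rule, $u = s$ and $v = \ELIMN l{x.M}s{x,y. s'}\ze$. Step (2) is the premise $\lstyping i s{M[\ze/x]}l$ directly. For step (3), from the syntactic premises and $\lttyping i \ze \Nat \ze$ the typing rule for $\telimn$ gives $\lttyping i{\ELIMN l{x.M}s{x,y. s'}\ze}{M[\ze/x]}l$, and the reduction rule $\ELIMN l{x.M}s{x,y. s'}\ze \redd s$ completes $\ttrmred i{\ELIMN l{x.M}s{x,y. s'}\ze}{s}{M[\ze/x]}{l}$; step (4) then closes the case.

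For the second rule, $u = s'[t/x,\ELIMN l{x.M}s{x,y. s'}t/y]$ and $v = \ELIMN l{x.M}s{x,y. s'}{(\su t)}$, with $v \redd u$. Step (2) is the real work: I would introduce the local substitution $\delta_0 = \id_\Gamma,\, t/x,\, \ELIMN l{x.M}s{x,y. s'}t/y$ and prove $\lssubst i{\delta_0}{\Gamma, x : \Nat \at 0, y : M \at l}$ using the semantic extension rule for local substitutions, \Cref{lem:dt:sem:lwk} (and its corollary that the local identity substitution is semantically well-formed) for the $\id_\Gamma$ component, the premise $\lstyping i t \Nat \ze$ for the $t/x$ component, and the reflexive instance of the semantic congruence rule for $\telimn$ established just above for the $\ELIMN l{x.M}s{x,y. s'}t / y$ component, which yields $\lstyping i{\ELIMN l{x.M}s{x,y. s'}t}{M[t/x]}l$ — noting $M[\id_\Gamma, t/x] = M[t/x]$. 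Feeding $\delta_0$ and the premise $\lstyping[\Psi][\Gamma, x : \Nat \at 0, y : M \at l] i{s'}{M[\su x/x]}l$ into the semantic local-substitution lemma gives $\lstyping i{s'[\delta_0]}{M[\su x/x][\delta_0]}l$; the substitution-algebra lemmas identify $s'[\delta_0]$ with $u$ and $M[\su x/x][\delta_0]$ with $M[\su t/x]$. Step (3) applies the typing rule for $\telimn$ with $\lttyping i{\su t}\Nat\ze$ together with the reduction rule $\ELIMN l{x.M}s{x,y. s'}{(\su t)} \redd u$, and step (4) finishes.

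The main obstacle I anticipate is step (2) of the second rule: assembling the semantic well-formedness of $s'[t/x,\ELIMN l{x.M}s{x,y. s'}t/y]$ at $M[\su t/x]$. This means invoking the already-proved semantic rule for $\telimn$ to type the recursive-call component of $\delta_0$, checking that $\delta_0$ is semantically well-formed at the twice-extended local context, and threading the composition identities $M[\id_\Gamma, t/x] = M[t/x]$ and $M[\su x/x][\id_\Gamma, t/x, \ELIMN\cdots/y] = M[\su t/x]$ correctly through the semantic local-substitution lemma. Everything else — the escape/presupposition recovery of syntactic premises, the applications of the $\telimn$ typing rule, and the final appeal to Reduction Expansion — is routine and mirrors the earlier $\beta$-rule proofs for functions.
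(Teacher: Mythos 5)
Your proposal is correct and follows exactly the same route as the paper's proof, which cites precisely the three ingredients you use: reduction expansion, the semantic local substitution lemma, and the preceding semantic congruence rule for $\telimn$ to discharge the recursive-call component of the substitution in the $\tsucc$ case. You have merely spelled out the details (the construction of $\delta_0$, the substitution-algebra identifications, and the recovery of the syntactic reduction step) that the paper's one-line proof leaves implicit.
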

\begin{proof}
  We use reduction expansion, semantic local substitution lemma and the previous lemma. %
\end{proof}

\begin{lemma}
  \begin{mathpar}
    \inferrule
    {\lstyequiv i t{t'}{T'}l \\ \lstypeq{\typeof i}T{T'}l}
    {\lstyequiv i t {t'} T l}
  \end{mathpar}
\end{lemma}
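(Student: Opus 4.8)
The plan is to unfold the two semantic premises and the semantic conclusion and reduce the body of the goal to the type‑conversion property of the Kripke logical relations, which is already available in the form of the transitivity and irrelevance lemmas (and, at the lowest running layers, the Conversion lemma for $\ldtyping \codelevel \proglevel$). First I would dispatch the side conditions of $\lstyequiv i t {t'} T l$: the components $\lsgctx \Psi$ and $\lsctxwf{\typeof i}\Gamma$ are carried over verbatim from $\lstyequiv i t {t'} {T'} l$, and $\lstypwf{\typeof i} T l$ follows from $\lstypeq{\typeof i} T {T'} l$ by the PER lemma for semantic judgments.

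For the body, fix $\tyequiv[L']{\phi}{\phi'}L$, $\ldSgsubeq[\Phi][L']\sigma{\sigma'}{\Psi[\phi]}$, some $k \ge i$, and $\ldsubeq[\Phi][\Delta][L']k{\typeof i}\delta{\delta'}{\Gamma[\phi][\sigma]}$. Instantiating $\lstyequiv i t {t'} {T'} l$ with this data yields $\ldtyequivt[\Phi][\Delta][L']k{\typeof i}{t[\phi][\sigma][\delta]}{t'[\phi'][\sigma'][\delta']}{T'[\phi][\sigma][\delta]}{l[\phi]}$, i.e.\ a derivation $\D :: \ldtypeq[\Phi][\Delta][L']k{\typeof i}{T'[\phi][\sigma][\delta]}{T''}{l[\phi]}$ for some $T''$, together with $\ldtyequiv[\Phi][\Delta][L']k{\typeof i}{t[\phi][\sigma][\delta]}{t'[\phi'][\sigma'][\delta']}\D$. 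Since $\ttypeof$ is idempotent, $\typeof{\typeof i} = \typeof i$, so $\lstypeq{\typeof i}T{T'}l$ is a statement about the same layer $\typeof i$ that occurs in the body above. I would then split on $k$. When $k \in \{\proglevel, \metalevel\}$ one has $k \ge \typeof i$, so instantiating $\lstypeq{\typeof i} T {T'} l$ at this same $k$ and these same substitutions gives $\E' :: \ldtypeq[\Phi][\Delta][L']k{\typeof i}{T[\phi][\sigma][\delta]}{T'[\phi'][\sigma'][\delta']}{l[\phi]}$. By transitivity of related types (at layer $j = \typeof i$; see \Cref{sec:dt:logrel-prop} and the $i=j=\metalevel$ analogue) $\E'$ and $\D$ compose to $\F :: \ldtypeq[\Phi][\Delta][L']k{\typeof i}{T[\phi][\sigma][\delta]}{T''}{l[\phi]}$, and since $\D$ and $\F$ share the right endpoint $T''$, Left Irrelevance transfers $\ldtyequiv{}{\ldots}\D$ to $\ldtyequiv{}{\ldots}\F$, which is precisely $\ldtyequivt[\Phi][\Delta][L']k{\typeof i}{t[\phi][\sigma][\delta]}{t'[\phi'][\sigma'][\delta']}{T[\phi][\sigma][\delta]}{l[\phi]}$, as needed.

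When $k \in \{\varlevel, \codelevel\}$ (so $\typeof i = \proglevel$ and $k \not\ge \typeof i$) the logical relation is the syntactic‑information‑tracking predicate: the body unfolds to $\ldtyping k \proglevel{t[\phi][\sigma][\delta]}{T'[\phi][\sigma][\delta]}{l[\phi]}$ together with $t[\phi][\sigma][\delta] = t'[\phi'][\sigma'][\delta']$. Here I would instantiate $\lstypeq{\typeof i}T{T'}l$ at the legal lower bound $k' = \proglevel$ and over all related local substitutions to obtain $\ldStypeqge[\Phi][\Gamma[\phi][\sigma]][L']\proglevel\proglevel{T[\phi][\sigma]}{T'[\phi'][\sigma']}{l[\phi]}$, push it along $\delta$ by \Cref{lem:dt:lsubst-ge}, and then apply the Conversion lemma for $\ldtyping \codelevel \proglevel$ (together with symmetry of $\ldStypeqge$) to rewrite the type annotation from $T'[\phi][\sigma][\delta]$ to $T[\phi][\sigma][\delta]$; the syntactic equality of $t$ and $t'$ is untouched. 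This finishes all cases.

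The main obstacle I expect is precisely this layer bookkeeping: tracking the relationship between the running layer $k$ (only known to be $\ge i$) and the type layer $\typeof i$, so that the semantic type‑equivalence premise can legitimately be instantiated (directly when $k \in \{\proglevel,\metalevel\}$, via $k' = \proglevel$ and \Cref{lem:dt:lsubst-ge} when $k \in \{\varlevel,\codelevel\}$), and making sure that at the two lowest layers the argument genuinely reduces to the Conversion lemma for $\ldtyping \codelevel \proglevel$ rather than to the genuine Kripke relation. Once the layering is organized, every remaining step is a direct appeal to transitivity, irrelevance, or Conversion already established above.
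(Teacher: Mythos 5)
Your proof is correct and follows essentially the paper's route: dispatch the side conditions, unfold the body of the semantic judgments, and reduce the goal to transitivity of related types followed by (left) irrelevance, with a separate appeal to the conversion/case-analysis argument on the syntactic-tracking predicate $\ldtyping$ at the two lowest running layers. Your presentation makes the case split on the running layer $k$ explicit (rather than only on $i$ as in the paper's terse one-line sketch), and it unpacks the paper's ``irrelevance'' into transitivity plus left irrelevance and invokes the packaged Conversion lemma for $\ldtyping \codelevel \proglevel$ rather than redoing the underlying case analysis, but these are just finer-grained expositions of the same argument.
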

\begin{proof}
  Use irrelevance when $i \in \{\proglevel, \metalevel\}$. %
  When $i \in \{\varlevel, \codelevel\}$, we do a case analysis on the semantics of related terms. %
  Notice that all rules for terms in \Cref{sec:dt:semglob} contain a type relation. %
  We apply transitivity of related types and irrelevance. %
\end{proof}

\subsection{More Semantic Rules}

In the previous section, we have considered all possible (non-trivial) rules for all
layers. %
Among these rules, we have looked at the rules for global variables and see how
layering restriction enables code running in the semantics. %
In this section, we will finish the proof by considering rules that are available at
layer $\metalevel$. %
This will make our proofs in some sense simpler; there is only one layer to
consider. %
On the other hand, we will look into another important feature, recursors for code,
and how it is semantically justified. %

\begin{lemma}
  \begin{mathpar}
    \inferrule
    {\lsctxwf \metalevel\Gamma \\ \lsctxeq \proglevel \Delta{\Delta'} \\ \lstypeq[\Psi][\Delta] \proglevel T{T'} l \\ \tyequiv[L]l{l'}\Level}
    {\lstypeq \metalevel{\CTrm[\Delta]T l}{\CTrm[\Delta']{T'}{l'}}{0}}
  \end{mathpar}
\end{lemma}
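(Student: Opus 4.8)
The plan is to unfold the semantic judgment $\lstypeq \metalevel{\CTrm[\Delta]T l}{\CTrm[\Delta']{T'}{l'}}{0}$ and reduce it to the logical-relation rule for contextual types for terms (the $\CTrm$-case of the Kripke logical relations when $j=\metalevel$). First I would discharge the standing conditions of the semantic judgment: from $\lsctxwf \metalevel \Gamma$ we already have $\lsgctx \Psi$, and $\lsctxwf\metalevel\Gamma$ itself is exactly $\lsctxwf{\typeof\metalevel}\Gamma$ since $\typeof\metalevel=\metalevel$. Then I assume $\tyequiv[L']{\phi}{\phi'}L$, a related pair $\ldSgsubeq[\Phi][L']{\sigma}{\sigma'}{\Psi[\phi]}$, some $k\ge\metalevel$ (so $k=\metalevel$), and a related local substitution $\delta\approx\delta'$ over $\Gamma[\phi][\sigma]$ in some ambient local context $\Theta$. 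The key syntactic observation is that all three substitutions act on $\CTrm[\Delta]T l$ by merely propagating into the inner context and the body, and that a local substitution has no effect on a contextual type (exactly as for $\boxit$): concretely $(\CTrm[\Delta]T l)[\phi][\sigma][\delta] = \CTrm[\Delta[\phi][\sigma]]{T[\phi][\sigma]}{l[\phi]}$, and likewise on the primed side. Hence both sides are already in weak head normal form, so the two $\ttypreds\metalevel{\cdot}{\cdot}{0}$ premises of the $\CTrm$-rule hold with zero reduction steps, modulo the well-formedness side condition which comes from the escape/presupposition lemmas applied to the semantic hypotheses.

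Next I would instantiate the remaining two semantic hypotheses. Applying $\lsctxeq \proglevel \Delta{\Delta'}$ and $\lstypeq[\Psi][\Delta] \proglevel T{T'} l$ with the fixed $\phi,\phi',\sigma,\sigma'$ — and recording that a reflexive global weakening is a related global substitution (\Cref{lem:dt:sem:gwk}), so that stability under further global weakenings follows by re-instantiation, and that related local substitutions at layers $\ge\proglevel$ are accepted directly — yields the ``$\ge\proglevel$'' packages $\ldSctxeqge \proglevel \proglevel{\Delta[\phi][\sigma]}{\Delta'[\phi'][\sigma']}$ and $\ldStypeqge[\Phi][\Delta[\phi][\sigma]][L'] \proglevel \proglevel{T[\phi][\sigma]}{T'[\phi'][\sigma']}{l[\phi]}$; the symmetric witness at $\Delta'[\phi'][\sigma']$ is obtained from the (semantic/logical-relation) PER symmetry. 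Escape from these gives the remaining syntactic premises of the $\CTrm$-rule, namely $\lpjudge \metalevel \Theta$, $\lttypwf[\Phi][\Delta[\phi][\sigma]] \proglevel{T[\phi][\sigma]}{l[\phi]}$ and its primed counterpart, and — feeding $\ltctxgeq \proglevel{\Delta[\phi][\sigma]}{\Delta'[\phi'][\sigma']}$ together with $\lttypgeq[\Phi][\Delta[\phi][\sigma]][L'] \proglevel{T[\phi][\sigma]}{T'[\phi'][\sigma']}{l[\phi]}$ into the ``Type Constructors'' law for $\CTrm$ at layer $\metalevel$ — the generic-equivalence premise $\lttypgeq \metalevel{\CTrm[\Delta[\phi][\sigma]]{T[\phi][\sigma]}{l[\phi]}}{\CTrm[\Delta'[\phi'][\sigma']]{T'[\phi'][\sigma']}{l'[\phi']}}{0}$. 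Assembling all of these and applying the $\CTrm$-rule produces $\ldtypeq[\Phi][\Theta][L'] \metalevel \metalevel{\CTrm[\Delta[\phi][\sigma]]{T[\phi][\sigma]}{l[\phi]}}{\CTrm[\Delta'[\phi'][\sigma']]{T'[\phi'][\sigma']}{l'[\phi']}}{0}$, which is the goal, since $0[\phi]=0$ and the difference between $l[\phi]$ and $l'[\phi']$ (which are equivalent by $\tyequiv[L]l{l'}\Level$ and the universe-substitution lemmas) is absorbed by the implicit universe-equivalence premises of that rule.

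The main obstacle I expect is precisely the bookkeeping of the previous paragraph: the semantic judgments quantify over \emph{all} universe and global substitutions at \emph{all} layers $\ge i$, whereas the $\CTrm$ logical-relation rule wants exactly the layer-$\proglevel$ ``ge'' packages (stability under further global weakenings and under layer-$(\ge\proglevel)$ local substitutions). Matching these up requires viewing global weakenings as special global substitutions and being careful about which contexts the running substitutions land in. A helpful simplification is that, because $\Delta$, $\Delta'$, $T$, $T'$ are pure MLTT data living at layer $\proglevel$, the layer-$\proglevel$ semantic hypotheses are already in the right shape and \emph{no} layering-restriction step is needed here — in contrast to the term-level global-variable rules, where one must pass between $\proglevel$ and $\metalevel$. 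Everything else (the trivial reductions, the congruence law, and the escape/presupposition chains) is routine.
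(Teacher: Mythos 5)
Your proposal is correct and follows essentially the same route as the paper's proof: unfold the semantic judgment, note that $\CTrm$ is already a weak head normal form so the reduction premises are trivial, discharge the ``$\ge\proglevel$'' packages $\ldSctxeqge$ / $\ldStypeqge$ by re-instantiating the layer-$\proglevel$ semantic hypotheses (converting the ge-packages' weakenings into universe and global substitutions), handle the mismatch between $\Delta[\phi][\sigma]$ and $\Delta'[\phi'][\sigma']$ via the context equivalence and irrelevance, and obtain the syntactic premises by escape. The paper is terser and doesn't spell out the reflexive-global-weakening step you record, but the argument is the same.
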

\begin{proof}
  Assuming $\tyequiv[L']{\phi}{\phi'}L$,
  $\ldSgsubeq[\Phi][L'] \sigma{\sigma'}{\Psi[\phi]}$ and
  $\ldsubeq[\Phi][\Delta_1][L'] \metalevel \metalevel \delta{\delta'}{\Gamma[\phi][\sigma]}$, the
  focuses are
  \begin{itemize}
  \item $\Ac :: \ldSctxeqge[\Phi][L'] \proglevel \proglevel {\Delta[\phi][\sigma]}{\Delta'[\phi'][\sigma']}$,
  \item $\Bc :: \ldStypwfge[\Phi][\Delta[\phi][\sigma]][L'] \proglevel \proglevel{T[\phi][\sigma]}{l[\phi]}$ and
  \item $\Cc :: \ldStypwfge[\Psi][\Delta'[\phi][\sigma]][L'] \proglevel \proglevel{T'[\phi][\sigma]}{l[\phi]}$.
  \end{itemize}
  $\Ac$ is simple as we only need to convert weakenings to substitutions. %
  Then we can use $\lsctxeq \proglevel \Delta{\Delta'}$. %
  $\Bc$ and $\Cc$ are symmetric so we only focus on $\Cc$ which is slightly more
  complex. %
  We similarly want to convert weakenings to substitutions so that we can apply
  $\lstypeq[\Psi][\Delta] \proglevel T{T'} l$ to obtain the goal. %
  The local contexts are mismatched, though it is not a problem. %
  The local contexts are used when further assuming related local substitutions, and
  by $\lsctxeq \proglevel \Delta{\Delta'}$, we can use irrelevance to swap the local contexts. 
\end{proof}

\begin{lemma}
  \begin{mathpar}
    \inferrule
    {\lsctxwf \metalevel\Gamma \\ \lsctxeq \proglevel \Delta{\Delta'} \\ \tyequiv[L]l{l'}\Level}
    {\lstypeq \metalevel{\CTyp[\Delta]l}{\CTyp[\Delta']{l'}}{0}}
  \end{mathpar}
\end{lemma}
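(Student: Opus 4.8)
The plan is to unfold the definition of $\lstypeq \metalevel{\CTyp[\Delta]l}{\CTyp[\Delta']{l'}}{0}$ and reduce the goal to a single instance of the $\CTyp$-clause of the Kripke logical relations, following the proof just given for $\CTrm$ with the body-type components deleted. From the hypotheses we already have $\lsgctx \Psi$ and $\lsctxwf \metalevel \Gamma$. Fixing related universe substitutions $\tyequiv[L']{\phi}{\phi'}L$, related global substitutions $\ldSgsubeq[\Phi][L'] \sigma{\sigma'}{\Psi[\phi]}$, a layer $k \ge \metalevel$ (hence $k = \metalevel$), and related local substitutions $\ldsubeq[\Phi][\Delta_1][L'] \metalevel \metalevel \delta{\delta'}{\Gamma[\phi][\sigma]}$ in some ambient local context $\Delta_1$, we must construct a derivation of $\ldtypeq[\Phi][\Delta_1][L'] \metalevel \metalevel{\CTyp[{\Delta[\phi][\sigma]}]{l[\phi]}}{\CTyp[{\Delta'[\phi'][\sigma']}]{l'[\phi']}}{0}$. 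Here I use that a local substitution does not propagate into a contextual type, while a universe and a global substitution act only on its local-context index and level.

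Next I would build this derivation via the $\CTyp$ clause, discharging its premises as follows. The two reflexive, zero-step reduction premises $\ttypreds[\Phi][\Delta_1][L'] \metalevel{\CTyp[{\Delta[\phi][\sigma]}]{l[\phi]}}{\CTyp[{\Delta[\phi][\sigma]}]{l[\phi]}}{0}$ and its primed counterpart amount to the syntactic well-formedness of these contextual types, which follows from the well-formedness rule for $\CTyp$, given $\typing[L']{l[\phi]}{\Level}$ (from $\typing[L]l{\Level}$ and the universe-substitution lemma), $\lpjudge[\Phi][L'] \metalevel{\Delta_1}$ (from presupposition of the related local substitutions), and $\lpjudge[\Phi][L'] \proglevel{\Delta[\phi][\sigma]}$ (see below). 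The premises $\lpjudge[\Phi][L'] \proglevel{\Delta[\phi][\sigma]}$ and $\lpjudge[\Phi][L'] \proglevel{\Delta'[\phi'][\sigma']}$ are obtained by escape (and syntactic presupposition) from the $\ldSctxeqge$ premise established in the next step, after converting the ambient weakenings into the relevant universe and global substitutions. The generic type-equivalence premise $\lttypgeq[\Phi][\Delta_1][L'] \metalevel{\CTyp[{\Delta[\phi][\sigma]}]{l[\phi]}}{\CTyp[{\Delta'[\phi'][\sigma']}]{l'[\phi']}}{0}$ is delivered by the type-constructor law for $\CTyp$ at layer $\metalevel$ applied to $\ltctxgeq[\Phi][L'] \proglevel{\Delta[\phi][\sigma]}{\Delta'[\phi'][\sigma']}$, which is the escape of the $\ldSctxeqge$ premise; the mismatch between $l[\phi]$ and $l'[\phi']$ is absorbed by the standing convention that these rules carry the equivalent-universe slack, using $\tyequiv[L']{l[\phi]}{l'[\phi']}\Level$.

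The heart of the argument, and the only premise needing real work, is $\ldSctxeqge[\Phi][L'] \proglevel \proglevel{\Delta[\phi][\sigma]}{\Delta'[\phi'][\sigma']}$. Unfolding $\ldSctxeqge$, I must relate $\Delta[\phi][\sigma]$ and $\Delta'[\phi'][\sigma']$ under every further global weakening, every layer $k' \ge \proglevel$, and every pair of related local substitutions; but this is exactly the content of $\lsctxeq \proglevel \Delta{\Delta'}$, fed the appropriate universe and global substitutions extracted from the ambient weakenings — the very manoeuvre performed in the preceding $\CTrm$ lemma (its step establishing $\Ac$). I expect this to be the main obstacle only at the level of bookkeeping: one must keep $\Delta$ and $\Delta'$ in the layer range $\ge \proglevel$ (so that the $j = \proglevel$ development and, where needed, the layering-restriction lemma apply) while the ambient context $\Delta_1$ lives at layer $\metalevel$, and one must keep the three substitution families $\phi$, $\sigma$, $\delta$ straight using the commutativity and composition laws of substitutions established earlier. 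No new idea beyond the $\CTrm$ case is required, since $\CTyp$ is its degenerate instance with no body type; aligning the universe levels $l \approx l'$ after substitution is a minor secondary point handled by the universe-substitution lemma.
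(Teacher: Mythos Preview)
Your proposal is correct and matches the paper's approach exactly: the paper's proof is literally ``Similar to above but simpler,'' referring to the $\CTrm$ lemma, and you have spelled out precisely that simplification---drop the body-type premises $\Bc$ and $\Cc$ and keep only the $\ldSctxeqge$ obligation $\Ac$, which is discharged by converting weakenings to substitutions and invoking $\lsctxeq \proglevel \Delta{\Delta'}$.
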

\begin{proof}
  Similar to above but simpler. 
\end{proof}

\begin{lemma}
  \begin{mathpar}
    \inferrule
    {\lsctxwf \metalevel \Gamma \\ \lstyping[\Psi][\Delta]\codelevel t T l}
    {\lstyequiv \metalevel{\boxit t}{\boxit{t}}{\CTrm[\Delta] T l}{0}}
  \end{mathpar}
\end{lemma}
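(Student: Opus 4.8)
The plan is to unfold the semantic judgment $\lstyequiv \metalevel{\boxit t}{\boxit{t}}{\CTrm[\Delta] T l}{0}$ and discharge its obligations one at a time. From the hypotheses $\lsctxwf \metalevel \Gamma$ (the ambient context) and $\lstyping[\Psi][\Delta]\codelevel t T l$ I immediately obtain $\lsgctx \Psi$, $\lsctxwf \metalevel \Gamma$, and, recorded inside the layer-$\codelevel$ semantic typing judgment, $\lsctxwf \proglevel \Delta$ and $\lstypwf[\Psi][\Delta] \proglevel T l$; feeding the last two into the already-proved semantic rule for $\CTrm$ gives $\lstypwf \metalevel{\CTrm[\Delta] T l}{0}$. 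It then remains to fix a universe substitution $\phi \approx \phi'$, related global substitutions $\ldSgsubeq \sigma{\sigma'}{\Psi[\phi]}$, and related local substitutions $\ldsubeq \metalevel \metalevel \delta{\delta'}{\Gamma[\phi][\sigma]}$ into some context $\Delta_1$ (the layer $k$ is forced to be $\metalevel$), and to prove
\[
  \ldtyequivt \metalevel \metalevel{(\boxit t)[\phi][\sigma][\delta]}{(\boxit t)[\phi'][\sigma'][\delta']}{(\CTrm[\Delta] T l)[\phi][\sigma][\delta]}{0}
\]
in context $\Delta_1$. By the substitution operations the box blocks the local substitution and propagates the universe and global ones, so this goal is $\ldtyequivt \metalevel \metalevel{\boxit{(t[\phi][\sigma])}}{\boxit{(t[\phi'][\sigma'])}}{\CTrm[\Delta[\phi][\sigma]]{T[\phi][\sigma]}{l[\phi]}}{0}$, and $\CTrm[\Delta[\phi][\sigma]]{T[\phi][\sigma]}{l[\phi]}$ is already a weak head normal form.

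First I would construct the derivation $\D :: \ldtypeq \metalevel \metalevel{\CTrm[\Delta[\phi][\sigma]]{T[\phi][\sigma]}{l[\phi]}}{\CTrm[\Delta[\phi][\sigma]]{T[\phi][\sigma]}{l[\phi]}}{0}$ via the $\CTrm$ clause of the Kripke logical relation. Its reduction premises are reflexive; the syntactic well-formedness of $T[\phi][\sigma]$ and of the ambient context come by escape; the generic-equivalence premise is reflexive, supplied by the Type Constructors law at layer $\metalevel$; and the two semantic side conditions $\ldSctxeqge \proglevel \proglevel{\Delta[\phi][\sigma]}{\Delta[\phi][\sigma]}$ and $\ldStypwfge[\Psi][\Delta[\phi][\sigma]] \proglevel \proglevel{T[\phi][\sigma]}{l[\phi]}$ I would get by instantiating the semantic judgments $\lsctxwf \proglevel \Delta$ and $\lstypwf[\Psi][\Delta] \proglevel T l$ with $\phi$ and $\sigma$ and using the closure-under-weakening properties of the ``$\ge\proglevel$'' definitions from \Cref{sec:dt:semglob}.

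Next I would build the term witness $\ldtyequiv \metalevel \metalevel{\boxit{(t[\phi][\sigma])}}{\boxit{(t[\phi'][\sigma'])}}{\D}$. Its reductions are reflexive with $w = w' = \boxit{(t[\phi][\sigma])}$; the generic-equivalence condition follows from the Congruence law for $\tbox$ at layer $\metalevel$; and the auxiliary relation $\ldtynfeqm{\boxit{(t[\phi][\sigma])}}{\boxit{(t[\phi'][\sigma'])}}{\CTrm[\Delta[\phi][\sigma]]{T[\phi][\sigma]}{l[\phi]}}$ reduces, by its defining constructor, to establishing the inductive code predicate $\ldtyping[\Psi][\Delta[\phi][\sigma]][L'] \codelevel \proglevel{t[\phi][\sigma]}{T[\phi][\sigma]}{l[\phi]}$. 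I would obtain this by instantiating the hypothesis $\lstyping[\Psi][\Delta]\codelevel t T l$ at $k = \codelevel$, feeding it the local identity substitution $\id_{\Delta[\phi][\sigma]}$, which is reducibly related to itself at layer $\codelevel$ by the Local Weakenings lemma using the $\ldSctxeqge$ fact above; this is the recurring ``at $k = \codelevel$ apply the local substitution lemma so as not to extend a variable to an arbitrary local substitution'' shortcut. That instantiation delivers both the code predicate and the syntactic equality $t[\phi][\sigma] = t[\phi'][\sigma']$ (modulo universe-level equivalence), the latter because related global substitutions agree on their layer-$\le\codelevel$ components together with the static code lemma; this equality is what makes the two occurrences of $\boxit{(\cdot)}$ in the $\ldtynfeqm$ constructor coincide.

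All the ingredients are already in place, so the main obstacle I expect is purely the plumbing in the two construction steps: correctly threading the nested universal quantifiers — over further global weakenings, over layers $k \ge \proglevel$, and over local substitutions — so that the semantic facts about $\Delta$ and $T$, stated relative to $\Psi; \Delta$, line up with what the $\CTrm$ clause and the $\ldtyping \codelevel \proglevel$ predicate demand once $\phi$ and $\sigma$ have been pushed in, and making the $k = \codelevel$ instantiation go through uniformly. This is routine given \Cref{sec:dt:logrel-prop} and \Cref{sec:dt:semglob}, but it is where the verbosity lives.
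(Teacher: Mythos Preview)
Your proposal is correct and follows essentially the same approach as the paper: both instantiate the hypothesis $\lstyping[\Psi][\Delta]\codelevel t T l$ with the given universe and global substitutions together with the identity local substitution (noting that $\boxit$ blocks the local substitution), which yields the required code predicate $\ldtyping[\Phi][\Delta[\phi][\sigma]]\codelevel \proglevel{t[\phi][\sigma]}{T[\phi][\sigma]}{l[\phi]}$. Your write-up simply spells out the construction of $\D$ and the side conditions more explicitly than the paper's terse two-line version; the point about $t[\phi][\sigma] = t[\phi'][\sigma']$ via static code is implicit in the paper but is indeed needed to match the $\ldtynfeqm$ constructor.
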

\begin{proof}
  Assuming $\tyequiv[L']{\phi}{\phi'}L$,
  $\ldSgsubeq[\Phi][L'] \sigma{\sigma'}{\Psi[\phi]}$ and
  $\ldsubeq[\Phi][\Delta'][L'] \metalevel \metalevel \delta{\delta'}{\Gamma[\phi][\sigma]}$, then
  the goal requires
  \[
    \ldtyping[\Phi][\Delta][L']\codelevel \proglevel{t[\phi][\sigma]}{T[\phi][\sigma]}{l[\phi]}
  \]
  Notice that the effect of local substitutions is irrelevant anymore. %
  This goal can be instantiate by\break $\lstyping[\Psi][\Delta]\codelevel t T l$ by passing in
  the same universe and global substitutions, and the identity local substitution. %
\end{proof}

\begin{lemma}
  \begin{mathpar}
  \inferrule
  {\lsctxwf \metalevel \Gamma \\ \lstypwf[\Psi][\Delta]\codelevel T l}
  {\lstyequiv \metalevel{\boxit T}{\boxit{T}}{\CTyp[\Delta] l}{0}}
  \end{mathpar}
\end{lemma}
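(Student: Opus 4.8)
The plan is to mirror, almost verbatim, the proof of the immediately preceding lemma for $\boxit t$ at $\CTrm[\Delta]{T}{l}$, which goes through more easily here since a contextual type of code of \emph{types} carries no term index. First I would collect the standing data: from $\lsctxwf \metalevel \Gamma$ we obtain $\lsgctx \Psi$, and from the hypothesis $\lstypwf[\Psi][\Delta]\codelevel T l$ we obtain $\lsgctx \Psi$ and $\lsctxwf \proglevel \Delta$ (the latter because $\typeof \codelevel = \proglevel$). Feeding $\lsctxwf \metalevel \Gamma$, $\lsctxwf \proglevel \Delta$ and reflexivity $\tyequiv[L]{l}{l}\Level$ into the semantic rule for $\CTyp$ proved two lemmas earlier yields the side condition $\lstypwf \metalevel{\CTyp[\Delta]{l}}{0}$. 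It then remains to unfold $\lstyequiv \metalevel{\boxit T}{\boxit{T}}{\CTyp[\Delta]{l}}{0}$: fixing $\tyequiv[L']{\phi}{\phi'}L$, a related pair $\ldSgsubeq[\Phi][L'] \sigma{\sigma'}{\Psi[\phi]}$, and a related pair of local substitutions $\ldsubeq[\Phi][\Xi][L']\metalevel\metalevel\delta{\delta'}{\Gamma[\phi][\sigma]}$, I must produce the logical relation relating $\boxit{(T[\phi][\sigma])}$ and $\boxit{(T[\phi'][\sigma'])}$ at $\CTyp[\Delta[\phi][\sigma]]{l[\phi]}$ and level $0$ --- here using that $\boxit{\cdot}$ absorbs the local substitutions $\delta,\delta'$ and that $\CTyp$ always sits at universe level $0$.

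Next I would construct the witnessing type derivation $\D$ relating $\CTyp[\Delta[\phi][\sigma]]{l[\phi]}$ and $\CTyp[\Delta[\phi'][\sigma']]{l[\phi]}$ via the contextual-type-for-types case of the Kripke logical relation. Its premises are all routine: the reduction premises hold because $\CTyp$ is already a weak head normal form; the $\vdash$-well-formedness premises on $\Gamma$ and the two $\Delta$'s follow from the Escape lemmas applied to $\lsgctx \Psi$, $\lsctxwf \metalevel \Gamma$ and the instance of $\lsctxwf \proglevel \Delta$; the generic equivalence $\lttypgeq$ between the two $\CTyp$ types follows from the Type Constructors law of the generic equivalence for $\CTyp$ together with $\ltctxgeq \proglevel{\Delta[\phi][\sigma]}{\Delta[\phi'][\sigma']}$, again by escape; and $\ldSctxeqge \proglevel \proglevel{\Delta[\phi][\sigma]}{\Delta[\phi'][\sigma']}$ comes from $\lsctxwf \proglevel \Delta$ instantiated at $\phi,\phi',\sigma,\sigma'$ and an arbitrary layer $\ge \proglevel$, after repackaging the resulting stability-under-weakening into the ``$\ge \proglevel$'' form that $\ldSctxeqge$ demands. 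With $\D$ in hand, relating the two box terms under $\D$ unfolds to: both box terms reducing to themselves; the generic equivalence $\lttrmgeq$ of the two box terms at $\CTyp[\Delta[\phi][\sigma]]{l[\phi]}$ from the $\boxit T$ congruence law of the generic equivalence; and the two auxiliary judgments $\ldtynfeqm{\boxit{(T[\phi][\sigma])}}{\boxit{(T[\phi'][\sigma'])}}{\CTyp[\Delta[\phi][\sigma]]{l[\phi]}}$ and its analogue with $\Delta[\phi'][\sigma']$.

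For those last two I would invoke the non-neutral clause of $\ldtynfeqm$, which requires the inductive predicate $\ldtypwf[\Phi][\Delta[\phi][\sigma]][L']\codelevel\proglevel{T[\phi][\sigma]}{l[\phi]}$ and --- since that clause forces both boxed bodies to be literally the same type --- the syntactic identity $T[\phi][\sigma] = T[\phi'][\sigma']$ (modulo universe-level equivalence, in the sense of the Static Code lemma). Both facts I would read off from the hypothesis $\lstypwf[\Psi][\Delta]\codelevel T l$ instantiated at $\phi,\phi',\sigma,\sigma'$ and the layer $\codelevel$: using the semantic Local Substitutions lemma to discharge the inner local substitution, the layer-$\codelevel$ clause of \Cref{sec:dt:semglob} gives $\ldtypeq[\Phi][\Delta[\phi][\sigma]][L']\codelevel\proglevel{T[\phi][\sigma]}{T[\phi'][\sigma']}{l[\phi]}$, which by the very definition of $\ldtypeq \codelevel \proglevel$ unpacks to exactly $\ldtypwf[\Phi][\Delta[\phi][\sigma]][L']\codelevel\proglevel{T[\phi][\sigma]}{l[\phi]}$ together with $T[\phi][\sigma] = T[\phi'][\sigma']$. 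Plugging these in --- and using the Irrelevance lemma together with $\ltctxgeq \proglevel{\Delta[\phi][\sigma]}{\Delta[\phi'][\sigma']}$ to restate the second auxiliary judgment in the context $\Delta[\phi'][\sigma']$ --- closes the goal.

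The step I expect to be the real obstacle is bookkeeping rather than mathematics: keeping the ambient post-substitution context $\Xi$ appearing in the unfolded goal separate from the context $\Delta[\phi][\sigma]$ in which the boxed code is checked, and forcing the weakening-stability of $\Delta$ into the precise ``$\ge \proglevel$'' shape required by $\ldSctxeqge$ and by the $j = \metalevel$ clause of the logical relation for $\CTyp$. Conceptually there is nothing new: this lemma is the exact transposition to types of the preceding $\boxit t$ case, and once that case is in place the present one should follow by the same moves.
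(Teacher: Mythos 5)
The paper's own proof is a one-liner (``Similar to above but simpler''), referring to the preceding lemma for $\boxit t$ at $\CTrm[\Delta]{T}{l}$; your proposal is a faithful and correct elaboration of exactly that reduction, with the key step being the same: instantiate the hypothesis $\lstypwf[\Psi][\Delta]\codelevel T l$ at layer $\codelevel$ with the given universe and global substitutions and an identity local substitution, and read off $\ldtypwf[\Phi][\Delta[\phi][\sigma]][L']\codelevel\proglevel{T[\phi][\sigma]}{l[\phi]}$ together with $T[\phi][\sigma] = T[\phi'][\sigma']$ from the definition of $\ldtypeq \codelevel \proglevel$.
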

\begin{proof}
  Similar to above but simpler. 
\end{proof}

\begin{lemma}
  \begin{mathpar}
    \inferrule
    {\tyequiv[L]{l_1}{l_3}\Level \\ \tyequiv[L]{l_2}{l_4}\Level \\
      \lpequiv \proglevel \Delta{\Delta'} \\
      \lstypeq \proglevel T{T'}{l_2} \\
      \lstyequiv \metalevel{t}{t'}{\CTrm[\Delta]T{l_2}}{0} \\
      \lstypeq[\Psi][\Gamma,x_T : \CTrm[\Delta]T{l_2} \at{0}]\metalevel{M}{M'}{l_1} \\
      \lstyequiv[\Psi, u : \DTrm[\Delta]\codelevel T {l_2}]\metalevel{t_1}{t_2}{M[\boxit{u^\id}/x_t]}{l_1}}
    {\lstyequiv \metalevel{\LETBTRM{l_1}{l_2} \Delta{T}{x_t.M}{u}{t_1}t}{\LETBTRM{l_3}{l_4}{\Delta'}{T'}{x_T.M'}{u}{t_2}{t'}}{M[t/x_t]}{l_1}}
  \end{mathpar}
\end{lemma}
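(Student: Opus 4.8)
The plan is to mirror the \texttt{\tletbox} case of the fundamental theorem for validity in \Cref{sec:cv:logrel}, scaled to \delamlang at layer $\metalevel$. Since the parameter layer is $\metalevel$ and $\typeof\metalevel = \metalevel$, every instantiation of the semantic judgments below is at $k = \metalevel$ with $j = \metalevel$, so there is none of the layer-$\codelevel$ or layer-$\varlevel$ bookkeeping that appears in the MLTT rules. First I would extract the standing hypotheses $\lsgctx\Psi$ and $\lsctxwf\metalevel\Gamma$ from the premises, and establish semantic well-formedness of the result type: from $\lstyequiv\metalevel t{t'}{\CTrm[\Delta]T{l_2}}0$ and the semantic rule for consed local substitutions I get $\lssubst\metalevel{t/x_t}{\Gamma,x_T:\CTrm[\Delta]T{l_2}\at 0}$, and then the semantic local-substitution lemma applied to the premise $\lstypeq[\Psi][\Gamma,x_T:\CTrm[\Delta]T{l_2}\at 0]\metalevel M{M'}{l_1}$ yields $\lstypwf\metalevel{M[t/x_t]}{l_1}$, which is the type component of the goal.

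Next I would assume a related universe substitution $\phi\approx\phi'$, a related global substitution $\sigma\approx\sigma'$ over $\Psi[\phi]$, and a related local substitution $\delta\approx\delta'$ over $\Gamma[\phi][\sigma]$ into some $\Xi$, and reduce the goal to exhibiting a derivation $\D$ witnessing that the two instantiated result types are related and that the two instantiated $\tletbox$-terms are related over $\D$. The type part follows from $\lstypwf\metalevel{M[t/x_t]}{l_1}$ just established. For the term part, I would instantiate the premise $\lstyequiv\metalevel t{t'}{\CTrm[\Delta]T{l_2}}0$ with $\phi,\sigma,\delta$; unfolding the logical relation for $\CTrm$ gives weak head normal forms $t[\phi][\sigma][\delta]\reds w$ and $t'[\phi'][\sigma'][\delta']\reds w'$, their generic equivalence, and the auxiliary relation $\ldtynfeqm w{w'}{\CTrm[\Delta[\phi][\sigma]]{T[\phi][\sigma]}{l_2[\phi]}}$, on which I case split.

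In the neutral subcase, $w$ and $w'$ are neutral; the congruence reduction rule for $\tletbox$ shows each instantiated $\tletbox$-term reduces to the operator applied to the corresponding neutral, which is again neutral, so I would close the goal with the congruence-for-neutrals law for $\tletbox$ from the generic equivalence — its premises (related local contexts from the fourth premise, related index types $T\approx T'$ from the fifth, related motives $M\approx M'$ from the sixth, related branches $t_1\approx t_2$ from the seventh, and the generic-neutral-equivalence of $w,w'$) are discharged by escape and weakening — followed by reflexivity of neutral and the weak-head-closure law. In the box subcase, $w = w' = \boxit{t''}$ by the static code lemma (\Cref{lem:dt:static}, up to universe-level equivalence), together with $\ldtyping[\Phi][\Delta[\phi][\sigma]]\codelevel\proglevel{t''}{T[\phi][\sigma]}{l_2[\phi]}$; this is precisely the datum required to extend $\sigma$ (and $\sigma'$, which agrees with it on $\Delta,T$ up to the recorded equivalences) by $t''/u$ into a related global substitution over $\Psi[\phi],u:\DTrm[\Delta[\phi]]\codelevel{T[\phi]}{l_2[\phi]}$, using the semantic global-context extension rule for $\DTrm$ at $i'=\codelevel$, whose term slot asks for exactly such a $\ldtyping\codelevel\proglevel{}{}{}$ fact. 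Instantiating the seventh premise with this extended global substitution and $\delta$, and reorganizing by the algebra of substitutions, gives $\ldtyequivt[\Phi][\Xi][L']\metalevel\metalevel{t_1[\sigma[\phi],t''/u][\delta]}{t_2[\sigma'[\phi'],t''/u][\delta']}{M[\phi][\sigma][\delta,\boxit{t''}/x_t]}{l_1[\phi]}$. The $\beta$ rule for $\tletbox$ together with $t[\phi][\sigma][\delta]\reds\boxit{t''}$ (and the fact that $\tbox$ blocks local substitution) gives that the instantiated left $\tletbox$-term multi-step reduces to $t_1[\sigma[\phi],t''/u][\delta]$, and symmetrically on the right, so the weak-head-expansion law transports the relation up; and since $t[\phi][\sigma][\delta]\approx\boxit{t''}$ by soundness of reduction, $(\delta,t[\phi][\sigma][\delta]/x_t)$ and $(\delta,\boxit{t''}/x_t)$ are related local substitutions, so right irrelevance identifies the index type $M[\phi][\sigma][\delta,\boxit{t''}/x_t]$ with the goal's $M[t/x_t][\phi][\sigma][\delta]$. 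I expect the box subcase to be the main obstacle: arranging the extension of the related global substitution with $t''/u$ so that it fits the exact shape demanded by the semantic $\DTrm$-extension rule — relying on $\Delta[\phi][\sigma]$ consisting only of MLTT types so that $\ldtyping\codelevel\proglevel$ is genuinely available, and matching the recorded type-equivalence data on $\Delta,T$ so the $\sigma'$-side fits too — and then the substitution-algebra computation showing that the seventh premise's conclusion, after substitution, is literally the weak-head reduct of the $\tletbox$-term and that the transported annotated type $M[\boxit{u^\id}/x_t]$ agrees, up to the logical relation, with $M[t/x_t]$ transported directly. The neutral subcase is routine once the congruence-for-neutrals law is in hand, and the presence of the $T'$ and $M'$ symmetry-breaking premises is exactly what lets the escape steps there go through without the converse direction.
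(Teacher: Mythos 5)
Your proof is correct and takes essentially the same route as the paper's: instantiate the $\lstyequiv$ premise for the scrutinee, case-split on whether the weak-head normal form is a boxed term or a neutral, extend the related global substitution by $t''/u$ using the $\ldtyping\codelevel\proglevel$ datum supplied by $\ldtynfeqm$ and close with reduction expansion and irrelevance in the box case, and close with the congruence-for-neutrals law and reflexivity of neutrals in the neutral case. You are in fact more careful than the paper in two small spots: you correctly record the universe level as $l_2[\phi]$ rather than $0$ in the extended global-context binding, and you explicitly note that the $\sigma'$-side of the extension must be validated against $\Delta',T'$ using the recorded equivalences (both sides of $\ldtynfeqm$ are given by the definition, so this does go through).
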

\begin{proof}
  Assuming $\tyequiv[L']{\phi}{\phi'}L$,
  $\ldSgsubeq[\Phi][L'] \sigma{\sigma'}{\Psi[\phi]}$ and
  $\ldsubeq[\Phi][\Delta'][L'] \metalevel \metalevel \delta{\delta'}{\Gamma[\phi][\sigma]}$, then
  the goal requires
  \[
    \ldtyequivt[\Phi][\Delta][L']\metalevel \metalevel{s_1}{s_2}{M[\phi][\sigma][\delta,t[\phi][\sigma][\delta]/x_t]}{l_1[\phi]}
  \]
  where $s_1 = {\LETBTRM{l_1}{l_2} \Delta{T}{x_t.M}{u}{t_1}t[\phi][\sigma][\delta]}$
  and\\
  $s_2 = {\LETBTRM{l_3}{l_4}{\Delta'}{T'}{x_T.M'}{u}{t_2}{t'}[\phi'][\sigma'][\delta']}$.
  By instantiating\\
  $\lstyequiv \metalevel{t}{t'}{\CTrm[\Delta]T{l_2}}{0}$, we have
  \[
    \ldtyequivt[\Phi][\Delta'][L']\metalevel \metalevel{t[\phi][\sigma][\delta]}{{t'}[\phi'][\sigma'][\delta']}{\CTrm[\Delta]T{l_2}[\phi][\sigma][\delta]}{0}
  \]
  Unfolding it, there are two possibilities for this relation.
  \begin{itemize}[label=Case]
  \item We know for some $t''$,
    \begin{itemize}
    \item $\ttrmreds[\Phi][\Delta'][L']
      k{t[\phi][\sigma][\delta]}{\boxit{t''}}{\CTrm[\Delta]T{l_2}[\phi][\sigma][\delta]}{0}$,
    \item $\ttrmreds[\Phi][\Delta'][L']
      k{t[\phi'][\sigma'][\delta']}{\boxit{t''}}{\CTrm[\Delta]T{l_2}[\phi][\sigma][\delta]}{0}$,
      and
    \item $\ldtyping[\Phi][\Delta[\phi][\sigma]][L']\codelevel \proglevel{t''}{T[\phi][\sigma]}{0}$
    \end{itemize}
    We observe that
    \[
      \ldSgsubeq[\Phi][L']{\sigma,t''/u}{\sigma', t''/u}{\Psi[\phi], u :
        \DTrm[\Delta[\phi][\sigma]] \codelevel{T[\phi][\sigma]}{0}}
    \]
    Giving it to $\lstyequiv[\Psi, u : \DTrm[\Delta]\codelevel T
    {l_2}]\metalevel{t_1}{t_2}{M[\boxit{u^\id}/x_t]}{l_1}$, we obtain
    \[
      \ldtyequivt[\Phi][\Delta'][L']\metalevel
      \metalevel{t_1[\phi][\sigma,t''/u][\delta]}{{t_2}[\phi'][\sigma',t''/u][\delta']}{M[\phi][\sigma][\delta,
        \boxit{t''}/x_t]}{l_1[\phi]}
    \]
    By reduction expansion, we have 
    \[
      \ldtyequivt[\Phi][\Delta'][L']\metalevel \metalevel{s_1}{s_2}{M[\phi][\sigma][\delta,
        \boxit{t''}/x_t]}{l_1[\phi]}
    \]
    which is almost the goal.  %
    To tame the goal, we observe that
    \[
      \ldtyequivt[\Phi][\Delta'][L']\metalevel \metalevel{t[\phi][\sigma][\delta]}{\boxit{t''}}{\CTrm[\Delta]T{l_2}[\phi][\sigma][\delta]}{0}
    \]
    which further allows us to conclude
    \[
      \ldsubeq[\Phi][\Delta'][L'] \metalevel
      \metalevel{\delta,t[\phi][\sigma][\delta]/x_t}{\delta',\boxit{t''}/x_t}{(\Gamma, x_T : \CTrm[\Delta]T{l_2} \at{0})[\phi][\sigma]}
    \]
    Apply $\lstypeq[\Psi][\Gamma,x_T : \CTrm[\Delta]T{l_2} \at{0}]\metalevel{M}{M'}{l_1}$
    gives us the goal using irrelevance and weak head expansion. %
  \item We know for some $\nu$ and $\nu'$, 
    \begin{itemize}
    \item $\ttrmreds[\Phi][\Delta'][L']
      k{t[\phi][\sigma][\delta]}{\nu}{\CTrm[\Delta]T{l_2}[\phi][\sigma][\delta]}{0}$,
    \item $\ttrmreds[\Phi][\Delta'][L']
      k{t[\phi'][\sigma'][\delta']}{\nu'}{\CTrm[\Delta]T{l_2}[\phi][\sigma][\delta]}{0}$,
      and
    \item $\lttrmgneeq[\Phi][\Delta'][L'] \metalevel \nu{\nu'}{\CTrm[\Delta] T{l_2}[\phi][\sigma][\delta]}{0}$
    \end{itemize}

    The idea here is to use the law of congruence for neutrals to establish a generic
    equivalence between neutrals, and then we use reflexivity for neutrals to relate
    two neutrals using the logical relations. %

    The process requires us to provide
    \[
      \ldsubeq[\Phi][\Delta',x_T : \CTrm[\Delta]T{l_2}[\phi][\sigma]
      \at{0}][L'] \metalevel \metalevel {\delta,x_T/x_T}{\delta',x_T/x_T}{(\Gamma, x_T :
        \CTrm[\Delta]T{l_2} \at{0})[\phi][\sigma]}
    \]
    This is immediate. %
    We also need
    \[
      \ldSgsubeq[\Phi,u : \DTrm[\Delta]\codelevel T {l_2}[\phi][\sigma]][L']{\sigma,u^\id/u}{\sigma',u^\id/u}{(\Psi, u : \DTrm[\Delta]\codelevel T {l_2})[\phi]}
    \]
    In this case, we should prove
    \[
      \ldStypingge[\Phi,u : \DTrm[\Delta]\codelevel T
      {l_2}[\phi][\sigma]][\Delta[\phi][\sigma]][L'] \proglevel
      \proglevel{u^\id}{T[\phi][\sigma]}{0}
    \]
    This turns out to have been checked by the lemma of reflexive global weakening in
    \Cref{sec:dt:semglob}. %
    Finally, $\ldsubeq[\Phi][\Delta'][L'] \metalevel \metalevel \delta{\delta'}{\Gamma[\phi][\sigma]}$
    is weakened before applying. 
  \end{itemize}
\end{proof}

\begin{lemma}
  \begin{mathpar}
    \inferrule
    {\tyequiv[L]{l_1}{l_3}\Level \\ \tyequiv[L]{l_2}{l_4}\Level \\ \lpequiv \proglevel \Delta{\Delta'} \\
      \lstyequiv \metalevel{t}{t'}{\CTyp[\Delta]{l_2}}{0} \\
      \lstypeq[\Psi][\Gamma,x_T : \CTyp[\Delta]{l_2} \at{0}]\metalevel{M}{M'}{l_1} \\
      \lstyequiv[\Psi, U : \DTyp[\Delta]\codelevel{l_2}]\metalevel{t_1}{t_2}{M[\boxit{U^\id}/x_T]}{l_1}}
    {\lstyequiv \metalevel{\LETBTYP{l_1}{l_2} \Delta{x_T.M}{U}{t_1}t}{\LETBTYP{l_3}{l_4}{\Delta'}{x_T.M'}{U}{t_2}{t'}}{M[t/x_T]}{l_1}}
  \end{mathpar}
\end{lemma}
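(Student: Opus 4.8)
The plan is to follow, almost verbatim, the proof just given for the $\tletbox$-for-terms rule; the $\tletbox$-for-types rule differs only in the absence of a term-index type, so it is strictly simpler. First I would extract the structural side conditions: from $\lstyequiv \metalevel{t}{t'}{\CTyp[\Delta]{l_2}}{0}$ we obtain $\lsgctx \Psi$ and $\lsctxwf \metalevel \Gamma$; combining the semantic rule for consing a local substitution with \Cref{lem:dt:sem:lwk} gives $\lssubst \metalevel{t/x_T}{\Gamma, x_T : \CTyp[\Delta]{l_2} \at{0}}$, and then the semantic local substitution lemma yields $\lstypwf \metalevel{M[t/x_T]}{l_1}$, discharging the well-formedness obligation on the result type.

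Next I would open the universal quantifier defining $\lstyequiv \metalevel{\cdot}{\cdot}{\cdot}{\cdot}$: assume $\tyequiv[L']{\phi}{\phi'}L$, $\ldSgsubeq[\Phi][L'] \sigma{\sigma'}{\Psi[\phi]}$ and a related local substitution $\ldsubeq[\Phi][\Delta_1][L'] \metalevel \metalevel \delta{\delta'}{\Gamma[\phi][\sigma]}$, reducing the goal to relating the two $\tletbox$ terms under these substitutions at type $M[\phi][\sigma][\delta, t[\phi][\sigma][\delta]/x_T]$ and level $l_1[\phi]$. Instantiating $\lstyequiv \metalevel{t}{t'}{\CTyp[\Delta]{l_2}}{0}$ yields related reducibility of $t$ and $t'$ at $\CTyp[\Delta[\phi][\sigma]]{l_2}$; unfolding the logical relation for $\CTyp$ and the auxiliary relation $\ldtynfeqm{\cdot}{\cdot}{\cdot}$ splits into two cases on the common weak head normal form.

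In the boxed-code case the normal form is $\boxit{T''}$ with the tracking predicate $\ldtypwf[\Phi][\Delta[\phi][\sigma]][L']\codelevel \proglevel{T''}{l_2[\phi]}$. I would observe that extending the related global substitution by $T''/U$ gives a related substitution at the binding $U : \DTyp[\Delta]\codelevel{l_2}$, feed it to the premise $\lstyequiv[\Psi, U : \DTyp[\Delta]\codelevel{l_2}]\metalevel{t_1}{t_2}{M[\boxit{U^\id}/x_T]}{l_1}$, and apply the reduction expansion lemma, since the $\beta$-rule for $\tletbox$ fires on $\boxit{T''}$; the type $M[\phi][\sigma][\delta,\boxit{T''}/x_T]$ obtained this way is realigned with the goal by extending $\delta$ and $\delta'$ with the related pair $t[\phi][\sigma][\delta]$, $\boxit{T''}$, invoking $\lstypeq[\Psi][\Gamma,x_T : \CTyp[\Delta]{l_2} \at{0}]\metalevel{M}{M'}{l_1}$, and using irrelevance and weak head expansion. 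In the neutral case the normal forms are neutrals; here I would use the law of congruence for neutrals for $\tletbox$ to build a generic equivalence of the two neutral $\tletbox$ terms, then reflexivity of neutral to re-enter the logical relation. This requires extending $\delta$, $\delta'$ by $x_T/x_T$ to apply the premise on $M$, extending $\sigma$, $\sigma'$ by $U^\id/U$ --- with the predicate $\ldtypwf \codelevel \proglevel{U^\id}{l_2}$ supplied by the reflexive-global-weakening lemma of \Cref{sec:dt:semglob} --- and weakening $\delta$, $\delta'$ to the enlarged context before applying.

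As with the $\tletbox$-for-terms lemma, the main obstacle is purely bureaucratic: assembling exactly the right extended global and local substitutions so that the premises on $M$, $M'$ and on $t_1$, $t_2$ apply in each case, and threading the generic-equivalence laws for neutral $\tletbox$. But since the motive is substituted uniformly on both sides and there is no type-index, none of the left/right-bias subtlety of \Cref{lem:dt:lsubst-eq} arises and the global substitution lemma is not needed here, so this case goes through with strictly less work than its term-level counterpart.
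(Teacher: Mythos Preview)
Your proposal is correct and takes essentially the same approach as the paper: the paper's entire proof is ``Similar to above but simpler,'' referring to the $\tletbox$-for-terms lemma, and you have spelled out precisely that similarity---same opening of the semantic judgment, same case split on the weak-head normal form of the scrutinee, same use of reduction expansion in the $\tbox$ case and of congruence-for-neutrals plus reflexivity-of-neutral in the neutral case, with the global substitution extended by $T''/U$ (resp.\ $U^\id/U$) in place of $t''/u$ (resp.\ $u^\id/u$).
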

\begin{proof}
  Similar to above but simpler. 
\end{proof}

Next, we consider the semantics for the recursive principles. %
The following lemma needs to be mutually proved.
\begin{lemma}
  If
  \begin{itemize}
  \item $S_A$,
  \item $\tyequiv[L']{\phi}{\phi'}L$,
  \item $\ldSgsubeq[\Phi][L'] \sigma{\sigma'}{\Psi[\phi]}$,
  \item $\ldsubeq[\Phi][\Delta_1][L'] \metalevel \metalevel \delta{\delta'}{\Gamma[\phi][\sigma]}$,
  \item $\tyequiv[L']l{l'}\Level$,
  \item $\ldSctxeqge[\Phi][L']\proglevel \proglevel\Delta{\Delta'}$,
  \item $\ttrmreds[\Phi][\Delta_1][L'] \metalevel t{\boxit{T_1}} {\CTyp[\Delta] l}{0}$,
  \item $\ttrmreds[\Phi][\Delta_1][L'] \metalevel{t'}{\boxit{T_1}}{\CTyp[\Delta] l}{0}$,
  \item $\Ac :: \ldtypwf[\Psi][\Delta][L'] \codelevel \proglevel {T_1}l$, 
  \item
    $t_1 = \ELIMTYPn{l_1[\phi]}{l_2[\phi]}{(\vect M[\phi][\sigma][\delta])}{(\vect
      b[\phi][\sigma][\delta])}l \Delta t$, and
  \item
    $t_2 = \ELIMTYPn{l_1[\phi]}{l_2[\phi]}{(\vect M'[\phi][\sigma][\delta])}{(\vect
      b'[\phi][\sigma][\delta])}{l'}{\Delta'}{t'}$,
  \end{itemize}
  then
  \[
    \ldtyequivt[\Phi][\Delta_1][L'] \metalevel \metalevel{t_1}{t_2}{M[l/\ell,\Delta/g,t/x_T][\phi][\sigma][\delta]}{l_1[\phi]}
  \]
\end{lemma}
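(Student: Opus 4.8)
The plan is to prove this by induction on the derivation $\Ac :: \ldtypwf \codelevel \proglevel {T_1} l$, carried out mutually with the analogous statement for the recursive principle for code of terms whose scrutinee reduces to a boxed term. Using $\Ac$ as the recursion measure is the right choice because $\Ac$ is precisely the inductive predicate from \Cref{sec:dt:semglob} that records, for $T_1$ and for each of its sub-structures, both its syntactic shape and its semantic content; its cases are in bijection with the possible heads of a layer-$\codelevel$ type ($\Nat$, a universe $\Ty{l'}$, a $\Pi$-encoding, a global variable, or a decode $\tEl$ of a layer-$\codelevel$ term), and these are exactly what the $\beta$-rules of the type recursor match on. Since $t$ reduces to $\boxit{T_1}$ and $t'$ reduces to $\boxit{T_1}$, the congruence rule for reduction first drives the scrutinee position of $t_1$ and $t_2$ to $\boxit{T_1}$; after that the matching $\beta$-rule fires, except in the global-variable case, where the term is already a neutral form and is stuck.

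For each constructor case I would argue uniformly. The reduct of $t_1$ is the corresponding branch body, with the branch's universe variables, global variables, and local variables substituted by the appropriate sub-structures of $T_1$ and --- in the $\Pi$ and $\tEl$ cases --- by the recursive calls on those sub-structures ($s_S$, $s_T$ for $\Pi$; a call to the recursor for code of terms on the decoded term for $\tEl$). For these recursive calls I would invoke the induction hypothesis on the corresponding sub-derivations of $\Ac$, which directly yields the logical-relation facts needed to extend $\delta$ and $\delta'$ by the recursive-call bindings. I would then apply the semantic premise for that branch contained in $S_A$, instantiating it with the universe substitutions $\phi,\phi'$ extended by the fresh level variables of the branch, the global substitutions $\sigma,\sigma'$ extended by $\Delta/g$ and by the boxed sub-structures (for $U_S$, $U_T$, etc.), and the local substitutions $\delta,\delta'$ extended by the recursive calls. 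Before that one checks that these extended substitutions are again related: for the global part, that $\Delta$, $S$, $T$ are semantically well-formed follows from the semantic relatedness of $\Delta$ and $\Delta'$ together with \Cref{lem:dt:ldc-pm} applied to the sub-derivations of $\Ac$; for the local part it is exactly what the induction hypothesis supplies. The semantic premise then gives the relation at the branch body, and the weak-head-expansion lemma for the logical relations transports it back to $t_1$ and $t_2$. It remains to check that the returned type, $M_\Typ$ at the branch-specific substitution, coincides --- up to the recorded equivalences (the level side-conditions carried by $\Ac$, such as $l \approx 0$ for $\Nat$ or $l \approx l_1 \sqcup l_2$ for $\Pi$, and the reduction of $t$ to $\boxit{T_1}$) and up to irrelevance of the logical relations --- with the target type $M[l/\ell,\Delta/g,t/x_T][\phi][\sigma][\delta]$; this is routine substitution algebra using naturality, \Cref{lem:dt:interact}, and \Cref{lem:dt:wkact}.

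For the global-variable case, $T_1$ is a global variable and $\ELIMTYP{l_1}{l_2}Mbl\Delta{(\boxit{T_1})}$ is a neutral form, so $t_1$ and $t_2$ reduce only to themselves (up to substitution). Here I would use the \emph{Neutral Recursion on Code} law of the generic equivalence: its hypotheses --- all motives and branches related, the two indexing universe levels equivalent, the two indexing local contexts related, and the two scrutinees related as boxed global variables --- follow by applying the escape lemmas to $S_A$, to the semantic relatedness of $\Delta$ and $\Delta'$, and to the global-variable lookup. This yields the generic equivalence of the two stuck recursions, and the \emph{Reflexivity of Neutral} lemma then places this into the logical relation at the neutral type. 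The result-type bookkeeping is as in the constructor cases.

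I expect the main obstacle to be twofold. First, threading the mutual induction with the term-recursor statement and keeping it visibly well-founded: the measure is the derivation $\Ac$, whose constructor cases genuinely expose sub-derivations (by the design of $\ldtypwf \codelevel \proglevel$), but the $\tEl$ case hands off to the term recursion, so the companion statement must be phrased to recurse back into this one on a strictly smaller derivation. Second --- and this is where most of the labour sits --- the substitution bookkeeping: after composing $\phi$, $\sigma$, and $\delta$ with each branch's own substitution for $g$, for the $U$-variables, for the level variables, and for the recursive-call variables, one must show that \emph{both} the branch body \emph{and} the returned motive line up with the target expressions, and do so across all five cases. The algebraic laws suffice, but sequencing them correctly, and forcing all the universe-level side-conditions to agree using the coverage and progress lemmas of \Cref{sec:dt:syn:cov} together with the equivalences stored in $\Ac$, is delicate.
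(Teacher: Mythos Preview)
Your proposal is correct and follows essentially the same approach as the paper: a mutual induction on $\Ac$ (together with the companion derivation $\Bc$ for the term recursor), handling constructor cases by invoking the branch's semantic premise in $S_A$ on suitably extended substitutions with recursive calls supplied by the IH and closing via weak head expansion, and handling the global-variable case via the Neutral Recursion on Code law plus reflexivity of neutrals. Your identification of the two main obstacles---well-foundedness of the mutual recursion through $\tEl$, and the substitution bookkeeping for both the branch body and the motive---is accurate and matches where the paper's effort concentrates.
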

\begin{lemma}
  If
  \begin{itemize}
  \item $S_A$,
  \item $\tyequiv[L']{\phi}{\phi'}L$,
  \item $\ldSgsubeq[\Phi][L'] \sigma{\sigma'}{\Psi[\phi]}$,
  \item $\ldsubeq[\Phi][\Delta_1][L'] \metalevel \metalevel \delta{\delta'}{\Gamma[\phi][\sigma]}$,
  \item $\tyequiv[L']l{l'}\Level$,
  \item $\ldSctxeqge[\Phi][L']\proglevel \proglevel\Delta{\Delta'}$,
  \item $\ldStypeqge[\Phi][\Delta][L'] \proglevel \proglevel T{T'} l$,
  \item $\ttrmreds[\Phi][\Delta_1][L'] \metalevel t{\boxit t} {\CTrm[\Delta] T l} 0$,
  \item $\ttrmreds[\Phi][\Delta_1][L'] \metalevel{t'}{\boxit{t_1}}{\CTrm[\Delta] T l} 0$,
  \item $\Bc :: \ldtyping[\Psi][\Delta][L'] \codelevel \proglevel {t_1}T l$,
  \item
    $t_1 = \ELIMTRMn{l_1[\phi]}{l_2[\phi]}{(\vect M[\phi][\sigma][\delta])}{(\vect
      b[\phi][\sigma][\delta])}l \Delta T t$, and
  \item
    $t_2 = \ELIMTRMn{l_1[\phi]}{l_2[\phi]}{(\vect M'[\phi][\sigma][\delta])}{(\vect
      b'[\phi][\sigma][\delta])}{l'}{\Delta'}{T'}{t'}$,
  \end{itemize}
  then
  \[
    \ldtyequivt[\Phi][\Delta_1][L'] \metalevel \metalevel{t_1}{t_2}{M'[l/\ell,\Delta/g,T/U_T,t/x_t][\phi][\sigma][\delta]}{l_2[\phi]}
  \]
\end{lemma}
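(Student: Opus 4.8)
The plan is to prove this lemma simultaneously with its sibling lemma for $\ELIMTYP$ by a \textbf{mutual induction on the derivations $\Bc :: \ldtyping[\Psi][\Delta][L'] \codelevel \proglevel {t_1} T l$ and $\Ac :: \ldtypwf[\Psi][\Delta][L'] \codelevel \proglevel {T_1} l$}. These inductive predicates were introduced in \Cref{sec:dt:semglob} precisely so that they mirror the typing derivations at layer $\codelevel$ while carrying both the syntactic shape and the full semantic (layer-$\proglevel$ and layer-$\metalevel$) information of every sub-structure; hence a recursion on $\Bc$ follows exactly the branch structure of the recursive principle for code of terms. A sub-structure that happens to be a \emph{type} — the domain of a $\lambda$, or the domain/codomain in an application, or the motive of a $\Nat$-recursor — forces a recursive call on the code-of-types recursor, so its induction hypothesis is the $\ELIMTYP$ lemma; this is why the two statements must be threaded together in one induction.

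\textbf{Constructor cases.} For each non-neutral shape of $t_1$ ($\ze$, $\su{\cdot}$, the type encodings $\Nat$/$\Ty{\cdot}$/$\PI{}{}{}{}{}$, a $\lambda$-abstraction, an application, a $\Nat$-recursor) I would proceed uniformly: (i) apply the corresponding $\beta$-rule of $\ELIMTRM$ from \Cref{sec:dt:modal} so that the weak head reductions of $t_1$ and $t_2$ unfold the recursor to the matching branch with the sub-structures substituted for the branch's global variables and the recursive calls substituted for the $x_\bullet$; (ii) instantiate the corresponding premise of $S_A$ (the semantic counterpart of that branch's typing judgment) with the global substitution assembled from $\sigma$ together with the sub-structures, and the local substitution assembled from $\delta$ together with the recursive calls — these recursive calls are exactly the induction hypotheses applied to the sub-derivations recorded inside $\Bc$ and $\Ac$; (iii) close by the \emph{Weak Head Expansion} lemma for the logical relations, using that reduction is stable under all substitutions. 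To make the induction hypotheses fire I would extract $\ldStypingge \proglevel \proglevel {t_1} T l$ and its companions from $\Bc$/$\Ac$ via \Cref{lem:dt:ldc-pm}, and then use \emph{Layering Restriction} to pass a sub-structure that is captured ``externally'' at layer $\proglevel$ — e.g. the return type $U_T$ of a $\lambda$, for which there is no recursive call — between $j = \proglevel$ and $j = \metalevel$, exactly as in the global-variable case that has already been handled.

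\textbf{Neutral cases.} When $\Bc$ is the local-variable case $t_1 = x$ with $x : T' \at{l'} \in \Delta$, I would apply the $t_x$ $\beta$-rule (it is harmless that $u_x$ lives at layer $\varlevel$, since $x$ is literally a variable) and then continue as in the constructor cases. When $\Bc$ is the global-variable case $t_1 = u^{\delta_0}$, the recursor is \emph{stuck}: applying $\ELIMTRM$ to $\boxit{u^{\delta_0}}$ is a neutral form, so I would invoke the \emph{Neutral Recursion on Code} law, feeding it the relatedness of all motives and branches coming from $S_A$, and then conclude membership in the logical relation at the output type $M_\Trm[\ldots]$ by \emph{Reflexivity of Neutral}.

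\textbf{Main obstacle.} The hard part is the substitution bookkeeping in the constructor cases. Each branch $t'_\bullet$ is typed in the heavily decorated contexts $\Psi'$, $\Gamma'$ of \Cref{sec:dt:modal} (with nested identity substitutions $\id_{g,x}$, level substitutions $[\ell/\ell]$, $U^\id/U$, and the like), and after plugging in $\sigma$, the actual sub-structures, and the recursive calls, one must check that the resulting type is \emph{syntactically} $M_\Trm[l/\ell,\Delta/g,T/U_T,t/x_t][\phi][\sigma][\delta]$ up to the equivalence of universe levels. This is where \Cref{lem:dt:interact} and the naturality and algebra lemmas of \Cref{sec:dt:syn} do the real work — the same manipulations already spelled out for the $\Pi$-case in the global substitution and equivalent-local-substitution lemmas, but substantially longer because the $\lambda$- and application-branches carry many more variables and the types of the recursive-call variables are themselves non-trivial substitution instances of the motives. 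A secondary subtlety, already noted, is that the mutual dependence on the $\ELIMTYP$ lemma for type sub-structures means both statements' induction hypotheses have to be available in every case, so the two lemmas genuinely have to be proved in a single mutual induction rather than one after the other.
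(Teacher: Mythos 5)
Your proposal is correct and follows essentially the same route as the paper: a mutual induction on the derivations $\Ac$ and $\Bc$ (which mirror the layer-$\codelevel$ typing derivations and carry the needed semantic data), with constructor cases handled by the corresponding $\beta$-rule plus the semantic branch premise from $S_A$ plus weak head expansion, and the global-variable (stuck) case handled by the neutral-recursion-on-code law together with reflexivity of neutral. Your remarks about needing \Cref{lem:dt:ldc-pm}, layering restriction for the externally-captured $U_T$, and the heavy substitution bookkeeping via \Cref{lem:dt:interact} all match what the paper's sketch relies on (it illustrates only the $u^{\delta_2}$ and $\lambda$ cases, where the latter also invokes the transitivity of $\ldStypeqge$ to reconcile $T$, $T'$, and the $\Pi$-type produced by the branch, which your ``substitution bookkeeping'' step subsumes).
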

where $S_A$ is the set containing all semantic judgments for motives and branches in
$L \sep \Psi; \Gamma$.
\begin{proof}
  The idea is to do a mutual induction on $\Ac$ and $\Bc$. %
  Recall that they are mutually defined structures as shown in
  \Cref{sec:dt:semglob}. %
  Let us pick two cases to discuss:
  \begin{itemize}[label=Case]
  \item
    \begin{mathpar}
      \inferrule
      {u : \DTrm[\Delta_2]{i'}{T_2}{l''} \in \Phi \\ i' \in \{\varlevel, \codelevel\} \\ 
        \ldsubst[\Phi][\Delta][L'] \codelevel \proglevel{\delta_2}{\Delta_2} \\
        \tyequiv[L]l{l''}\Level \\
        \ldStypeqge[\Phi][\Delta][L'] \proglevel \proglevel{T}{T_2[\delta_2]}l \\
        \ldStypingge[\Phi][\Delta][L'] \proglevel \proglevel{u^{\delta_2}}{T}l}
      {\ldtyping[\Phi][\Delta][L'] \codelevel \proglevel{u^{\delta_2}}{T}l}
    \end{mathpar}
    In this case, we must block the evaluation. %
    The idea follows closely to the neutral case for $\tletbox$ and we should apply
    the law of neutral recursion on code. %
    In this case, we see that $\ltctxgeq[\Phi][L'] \proglevel{\Delta}{\Delta'}$,
    $\tconvtyp[\Phi][\Delta][L'] \proglevel T{T'}{l}$. %
    Then we have to show that the motives and the branches are related by generic
    equivalence. %
    The idea is to extend all substitutions if necessary. %
    We have shown that all substitutions can be extended by identities in the
    semantics (and universe substitutions are the same in both syntax and
    semantics). %
    
  \item
    \begin{mathpar}
      \inferrule
      {\typing[L']{l_3}\Level  \\ \typing[L']{l_4}\Level \\ \ldtypwf[\Phi][\Delta][L'] \codelevel \proglevel{S_2}{l_3} \\
        \ldtyping[\Phi][\Delta, x :{S_2} \at{l_3}][L'] \codelevel \proglevel{t_2}{T_2}{l_4} \\
        \tyequiv[L']l{l_3 \sqcup l_4}\Level \\
        \ldStypeqge[\Phi][\Delta][L'] \proglevel \proglevel{T}{\PI{l_3}{l_4} x{S_2}{T_2}}l \\
        \ldStypingge[\Phi][\Delta][L'] \proglevel \proglevel{\LAM {l_3}{l_4} x{S_2}{t_2}}{T}l}
      {\ldtyping[\Phi][\Delta][L'] \codelevel \proglevel{\LAM {l_3}{l_4} x{S_2}{t_2}}{T}{l}}
    \end{mathpar}
    In this case, we should go down and recurse on $S_2$ and $t_2$. %
    We then use the semantic rule for $t_{\lambda}$ to substitute in the results of
    the recursive calls for $S_2$ and $t_2$. %
    We also obtain $\ldStypwfge[\Phi][\Delta, x :{S_2} \at{l_3}][L'] \proglevel \proglevel{T_2}l$. %
    Therefore we have everything we need to use the semantic rule for $t_\lambda$. %
    The only missing piece is that the conclusion requires relation between $T$ and
    $T'$. %
    Meanwhile, $t_\lambda$ only gives us ${\PI{l_3}{l_4} x{S_2}{T_2}}$. %
    The solution lies in
    \begin{gather*}
      \ldStypeqge[\Phi][\Delta][L'] \proglevel \proglevel T{T'} l \\
      \ldStypeqge[\Phi][\Delta][L'] \proglevel \proglevel{T}{\PI{l_3}{l_4} x{S_2}{T_2}}l
    \end{gather*}
    Thus these three types are related. %
    Since ${\PI{l_3}{l_4} x{S_2}{T_2}}$ is already in normal form, so we know both
    $T'$ and $T'$ must reduce to it. %
    Together with the $\beta$ rule when hitting the $\lambda$ case, we use weak head
    expansion to obtain the desired goal. %
  \end{itemize}
\end{proof}

These two lemmas in the semantics give the recursions on code of types and terms and
actual do the recursions. %
For the semantic rules for the recursors, we first use these lemmas to prove the
congruence rules. %
We are almost done wit the congruence rules except that we have to handle the neutral
cases. %
This is virtually identical to the global variable cases, where we use the law of
neutral recursion of code to relate neutral terms and use reflexivity for neutrals to
establish the logical relations. %
The rest are the $\beta$ rules. %
They are even simpler due to access to the congruence rules. %
Then we use reduction expansion to achieve the goals. %

What are left now are the meta-functions including universe-polymorphic functions. %
Fist the meta-functions for local contexts and types are very similar. %
\begin{lemma}
  \begin{mathpar}
    \inferrule
    {\lsctxwf \metalevel\Gamma \\ \lsctxeq \proglevel \Delta{\Delta'} \\ \lstypeq[\Psi, U : \DTyp[\Delta] \proglevel l] \metalevel T{T'}{l'} \\ \tyequiv[L]{l_1}{l_3}\Level \\ \tyequiv[L]{l_2}{l_4}\Level}
    {\lstypeq \metalevel{\TPI U[\Delta]{l_1}{l_2}T}{\TPI U[\Delta']{l_3}{l_4}{T'}}{l_2}}
  \end{mathpar}
\end{lemma}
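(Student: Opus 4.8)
The plan is to unfold the definition of the semantic judgment and verify, pointwise in all substitutions, the $\TPI$-clause of the Kripke logical relation. From the premise $\lsctxwf \metalevel \Gamma$ (which also yields $\lsgctx \Psi$) the top-level side conditions of $\lstypeq \metalevel{\TPI U[\Delta]{l_1}{l_2}T}{\TPI U[\Delta']{l_3}{l_4}{T'}}{l_2}$ are immediate, so after fixing $\tyequiv[L']{\phi}{\phi'}L$, a related pair $\ldSgsubeq[\Phi][L']{\sigma}{\sigma'}{\Psi[\phi]}$, and $k\ge\metalevel$ together with $\ldsubeq[\Phi][\Delta_1][L']{\metalevel}{\metalevel}{\delta}{\delta'}{\Gamma[\phi][\sigma]}$ (note $\typeof\metalevel=\metalevel$ and $k\ge\metalevel$ forces $k=\metalevel$), the goal reduces to $\ldtypeq[\Phi][\Delta_1][L']{\metalevel}{\metalevel}{(\TPI U[\Delta]{l_1}{l_2}T)[\phi][\sigma][\delta]}{(\TPI U[\Delta']{l_3}{l_4}{T'})[\phi'][\sigma'][\delta']}{l_2[\phi]}$. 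Since a $\TPI$ type is already a weak head normal form, the two reduction premises of the $\TPI$-clause hold in zero steps, with the required well-formedness coming from presupposition and the escape lemma applied to the semantic hypotheses; the typing premises for the components are obtained the same way from escaping $\lstypeq[\Psi, U : \DTyp[\Delta] \proglevel l] \metalevel T{T'}{l_2}$ and $\lsctxeq \proglevel \Delta{\Delta'}$. This mirrors the remark in the text that the meta-function types for types are ``very similar'' to the $\CPI$ case.

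For the generic-equivalence premise $\lttypgeq[\Phi][\Delta_1][L']{\metalevel}{(\TPI\cdots)[\phi][\sigma][\delta]}{(\TPI\cdots)[\phi'][\sigma'][\delta']}{l_2[\phi]}$ I would invoke the layer-$\metalevel$ ``Type Constructors'' law for $\TPI$, supplying $\ltctxgeq[\Phi][L']{\proglevel}{\Delta[\phi][\sigma]}{\Delta'[\phi'][\sigma']}$ and the generic equivalence of the substituted bodies, both produced by chaining the escape lemmas through the semantic hypotheses. The premise $\ldSctxeqge[\Phi][L']{\proglevel}{\proglevel}{\Delta[\phi][\sigma]}{\Delta'[\phi'][\sigma']}$ follows from $\lsctxeq \proglevel \Delta{\Delta'}$ by ranging over all $k\ge\proglevel$ and all further weakenings; here I use that $\Delta$ is a closed local context (it does not mention the variables of $\Gamma$), so $\delta$ acts trivially on it. This handles everything in the $\TPI$-clause except the two families $\E_1,\E_2$.

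The substantial step is constructing $\E_1$ (and symmetrically $\E_2$). Given a weakening $\psi$ and a pair $\ldStypeqge{\proglevel}{\proglevel}{T_2}{T_2'}{l_1}$ over the appropriate extended context, I would take the weakened related global substitution $\sigma[\psi]\approx\sigma'[\psi]$ and extend it by $T_2/U\approx T_2'/U$: this is exactly a related global substitution for $\Psi, U : \DTyp[\Delta] \proglevel l$, because $\ldStypeqge{\proglevel}{\proglevel}{T_2}{T_2'}{l_1}$ provides precisely the stability-under-(universe/global/local)-substitution data that the semantics of a $\DTyp$ binding demands. Applying the semantic hypothesis $\lstypeq[\Psi, U : \DTyp[\Delta] \proglevel l] \metalevel T{T'}{l_2}$ to this extended substitution and to $\delta$ (suitably weakened) gives a relatedness statement for $T[\phi][q(\sigma[\psi])][T_2/U][\delta]$, and the algebra-of-substitutions lemmas together with the skip lemmas for extended global substitutions reconcile this with the $T_1[T_2/U]$ appearing in the $\TPI$ logical relation; the duplicated family $\E_2$ comes from the mirror-image argument and is what makes symmetry of the relation a one-liner. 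Packaging $\E_1$, $\E_2$, the reduction and typing premises, and the generic equivalence into the $\TPI$-clause finishes the proof. The main obstacle I anticipate is purely bookkeeping: commuting the $q(\sigma)$, $T_2/U$ and $\delta$ substitutions into the exact normal form required by the logical-relation definition, and tracking the bureaucratic universe-level annotations ($l$ versus $l_1,\dots,l_4$) — conceptually shallow, but the place where a careless use of a naturality or skip lemma would break the argument.
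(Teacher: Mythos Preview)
Your proposal is correct and follows essentially the same approach as the paper: unfold the semantic judgment, note that the ``easy'' premises (reductions, typing, generic equivalence, $\ldSctxeqge{\proglevel}{\proglevel}{\Delta}{\Delta'}$) fall out of escape and the Type Constructors law, and then discharge $\E_1$ and $\E_2$ by extending the related global substitutions with $T_2/U \approx T_2'/U$ and invoking the semantic hypothesis on the body. The paper's proof is far terser (it simply says ``by extending $\sigma$ and $\sigma'$''), but your expanded account, including the observation that $\lsctxeq \proglevel \Delta{\Delta'}$ makes $\E_1$ and $\E_2$ symmetric, matches its content exactly.
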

\begin{proof}
  Assuming $\tyequiv[L']{\phi}{\phi'}L$,
  $\ldSgsubeq[\Phi][L'] \sigma{\sigma'}{\Psi[\phi]}$ and
  $\ldsubeq[\Phi][\Delta'][L'] \metalevel \metalevel \delta{\delta'}{\Gamma[\phi][\sigma]}$, then
  we should prove
  \[
    \ldtypeq[\Phi][\Delta'][L'] \metalevel \metalevel{\TPI
      U[\Delta]{l_1}{l_2}T[\phi][\sigma][\delta]}{\TPI
      U[\Delta']{l_3}{l_4}{T'}[\phi'][\sigma'][\delta']}{l_2[\phi]}
  \]
  Most premises are simple. The tricky part is to show $\E_1$ and $\E_2$, which are
  symmetric. %
  Knowing $\lsctxeq \proglevel \Delta{\Delta'}$, it is sufficient to only prove one of them. %
  Using $\lstypeq[\Psi, U : \DTyp[\Delta] \proglevel l] \metalevel T{T'}{l'}$, it is easy to see that
  both $\E_1$ and $E_2$ hold, by extending $\sigma$ and $\sigma'$. 
\end{proof}

\begin{lemma}
  \begin{mathpar}
    \inferrule
    {\lsctxwf \metalevel\Gamma \\ \lstypeq[\Psi, g : \Ctx] \metalevel T{T'} l \\ \tyequiv[L]l{l'}\Level}
    {\lstypeq \metalevel{\CPI g l T}{\CPI g{l'}{T'}}{l}}
  \end{mathpar}
\end{lemma}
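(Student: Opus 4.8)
The plan is to follow verbatim the structure of the immediately preceding $\TPI$ lemma, replacing the type-valued global binding $U : \DTyp[\Delta]\proglevel l$ by the context-valued binding $g : \Ctx$. First I would unpack the semantic judgment $\lstypeq \metalevel{\CPI g l T}{\CPI g{l'}{T'}}{l}$: from $\lsctxwf \metalevel \Gamma$ I already have $\lsgctx \Psi$ and $\lsctxwf{\typeof\metalevel}\Gamma = \lsctxwf\metalevel\Gamma$, so it remains to produce the Kripke datum. I fix arbitrary $\tyequiv[L']{\phi}{\phi'}L$, $\ldSgsubeq[\Phi][L'] \sigma{\sigma'}{\Psi[\phi]}$, and, since $k \ge \metalevel$ forces $k = \metalevel$, a related local substitution $\ldsubeq[\Phi][\Delta_1][L'] \metalevel \metalevel \delta{\delta'}{\Gamma[\phi][\sigma]}$. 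The goal is then $\ldtypeq[\Phi][\Delta_1][L'] \metalevel \metalevel{(\CPI g l T)[\phi][\sigma][\delta]}{(\CPI g{l'}{T'})[\phi'][\sigma'][\delta']}{l[\phi]}$, and I intend to inhabit the $\CPI$ case of the logical relation directly. Because $\CPI g{-}{-}$ is already a weak head normal form (and the global/local/universe substitutions only push inside and leave the head a $\CPI$), the reduction premises $\ttypreds$ are discharged by reflexivity, using the well-formedness of the substituted types obtained from the Escape lemma applied to the semantic hypothesis $\lstypeq[\Psi, g : \Ctx] \metalevel T{T'} l$ together with the substitution lemmas.

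Next I would dispatch the remaining non-recursive premises of the $\CPI$ rule: the well-formedness judgments for the bodies and the generic equivalence $\lttypgeq \metalevel{\CPI g {l[\phi]}{\cdots}}{\CPI g {l'[\phi']}{\cdots}}{l[\phi]}$ all follow by escaping $\lstypeq[\Psi, g : \Ctx] \metalevel T{T'} l$ (yielding a related pair of bodies) and applying the type-constructor law for $\CPI$ in the generic equivalence, plus the fact that $\tyequiv[L]l{l'}\Level$ is stable under $\phi$. The heart of the argument is the quantified premise $\E$: given a further weakening $\psi :: L'' \sep \Phi'; \Delta'' \To_m L' \sep \Phi; \Delta_1$ and $\Ac :: \ldSctxeqge[\Phi'][L''] \proglevel \proglevel{\Delta_g}{\Delta_g'}$, I must relate the bodies after instantiating $g$ by $\Delta_g$ resp.\ $\Delta_g'$. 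Here I build a related pair of global substitutions into $(\Psi, g : \Ctx)[\phi]$: take $\sigma$ (weakened along the global part of $\psi$, using the Weakening lemma for $\ldSgsubeq$) extended with $\Delta_g/g$, and symmetrically $\sigma'$ extended with $\Delta_g'/g$; by the definition of $\ldSgctxeq$ at a $g:\Ctx$ binding, this pair is related precisely because $\Ac$ supplies $\ldSctxeqge \proglevel \proglevel{\Delta_g}{\Delta_g'}$. Feeding this global substitution, the universe substitution $\phi$, and the weakened $\delta$ into $\lstypeq[\Psi, g : \Ctx] \metalevel T{T'} l$ produces a relation on $T[\phi][\sigma, \Delta_g/g][\delta]$ and $T'[\phi'][\sigma', \Delta_g'/g][\delta']$; the algebraic laws for substitutions (notably \Cref{lem:dt:interact} and the collapse $q(\sigma)\circ(\id,\Delta_g/g) = \sigma,\Delta_g/g$ used throughout \Cref{sec:dt:semglob}) rewrite this into the exact syntactic shape $T[\phi][\sigma,g/g][\delta,\cdots][\Delta_g/g]$ demanded by $\E$, after which irrelevance absorbs any mismatch in the carried derivation.

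I expect the main obstacle to be exactly this bookkeeping of the three kinds of substitutions — universe, global, local — and in particular verifying that the weakened-and-extended global substitution genuinely targets $(\Psi, g : \Ctx)[\phi]$ with the correct context component $\Delta_g$, so that the naturality equations line up. Once those equations are in place, the proof is a direct instantiation of the semantic hypothesis followed by appeals to irrelevance and the Escape lemma, in complete analogy with the $\TPI$ case proved just above; the $\CPI$ case is in fact strictly simpler, since no type-level index on the global binding needs to be tracked.
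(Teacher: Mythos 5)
Your proposal is correct and follows exactly the approach the paper intends: the paper's own proof of this lemma is just the one-liner ``Similar to above but simpler,'' deferring to the preceding $\TPI$ case, and your expansion of that deferral---fix Kripke data, inhabit the $\CPI$ clause of the logical relation, discharge the quantified premise $\E$ by extending $\sigma$, $\sigma'$ with $\Delta_g/g$, $\Delta_g'/g$ (which is a related $\ldSgsubeq$ into $\Psi[\phi], g:\Ctx$ precisely because the $g:\Ctx$ clause of the semantic global context asks only for $\ldSctxeqge \proglevel\proglevel$), then rewrite $T[\phi][\sigma,g/g][\delta][\Delta_g/g]$ into $T[\phi][\sigma,\Delta_g/g][\delta]$ using \Cref{lem:dt:interact} and the collapse $q(\sigma)\circ(\id,\Delta_g/g) = \sigma,\Delta_g/g$---is the expected content. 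You also correctly note the simplification: the $\TPI$ proof needs the two symmetric premises $\E_1,\E_2$ because of the type index $U:\DTyp[\Delta]\proglevel l$, whereas $\CPI$ has a single $\E$ and no type index to carry around.
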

\begin{proof}
  Similar to above but simpler. 
\end{proof}

\begin{lemma}
  \begin{mathpar}
    \inferrule
    {\lsctxwf \metalevel\Gamma \\ \lstyequiv[\Psi, U : \DTyp[\Delta] \proglevel{l_1}]\metalevel{t}{t'}{T}{l_2} \\
      \tyequiv[L]{l_1}{l_3}\Level \\ \tyequiv[L]{l_2}{l_4}\Level}
    {\lstyequiv \metalevel{\TLAM{l_1}{l_2}U t}{\TLAM{l_3}{l_4}U{t'}}{\TPI U[\Delta]{l_1}{l_2}
        T}{l_2}}
  \end{mathpar}
\end{lemma}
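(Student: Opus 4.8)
The plan is to follow the template of the preceding semantic rule for $\TPI U[\Delta]{l_1}{l_2}T$ together with that of the $\LAM$ lemma, the one genuinely new ingredient being that the body $t$ gets instantiated under a global substitution extended by a \emph{type} rather than by a term or a local context. First I would unpack the hypotheses: from $\lstyequiv[\Psi, U : \DTyp[\Delta] \proglevel{l_1}]\metalevel{t}{t'}{T}{l_2}$ we obtain $\lsgctx{\Psi, U : \DTyp[\Delta]\proglevel{l_1}}$, and hence $\lsgctx\Psi$, the semantic well-formedness of $\Delta$ at layer $\proglevel$, and $\lstypwf[\Psi, U : \DTyp[\Delta]\proglevel{l_1}]\metalevel T{l_2}$; together with the assumed $\lsctxwf\metalevel\Gamma$ and the level equivalences $\tyequiv[L]{l_1}{l_3}\Level$, $\tyequiv[L]{l_2}{l_4}\Level$, the already-proved semantic rule for $\TPI$ (plus symmetry and transitivity of the semantic judgments) yields the side condition $\lstypwf\metalevel{\TPI U[\Delta]{l_1}{l_2}T}{l_2}$.

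Next I would fix $\tyequiv[L']{\phi}{\phi'}L$, $\ldSgsubeq[\Phi][L']\sigma{\sigma'}{\Psi[\phi]}$, and --- since $i=\metalevel$ forces the quantified layer $k$ to be $\metalevel$ --- a related pair with $\ldsubeq[\Phi][\Delta_1][L']\metalevel\metalevel\delta{\delta'}{\Gamma[\phi][\sigma]}$. I must produce a derivation $\D :: \ldtypeq[\Phi][\Delta_1][L']\metalevel\metalevel{\TPI U[\Delta]{l_1}{l_2}T[\phi][\sigma][\delta]}{\TPI U[\Delta]{l_3}{l_4}T[\phi'][\sigma'][\delta']}{l_2[\phi]}$ for which the two (weakened) $\TLAM$'s are related. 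The derivation $\D$ is exactly what the $\TPI$ semantic rule produces under $\phi,\phi',\sigma,\sigma',\delta,\delta'$; I record its body components $\E_1$ and $\E_2$. At the $\TPI$ clause for terms the reduction premises are trivial, because $\TLAM{l_1}{l_2}U{t}$ is already a weak head normal form, and the required generic equivalences between the normal forms follow by the escape lemma for logical relations applied to the hypothesis (after extending the global substitution by $U^{\id}/U$) and then the congruence law for $\TLAM$ of the generic equivalence, using weak head closure to rewrite $\TAPP{(\TLAM{l_1}{l_2}U{t})}{U^{\id}}$ to its body --- the same move used in the $\LAM$ lemma ``modulo the law of weak head closure''.

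The substance of the proof is the body clause. Given a weakening $\psi :: L'' \sep \Phi';\Delta'' \To_m L' \sep \Phi;\Delta_1$ and a proof $\Ac :: \ldStypeqge[\Phi'][\Delta[\phi][\sigma]][L'']\proglevel\proglevel{T_2}{T_2'}{l_1[\phi]}$, I must establish that the weakened $\TAPP{w}{T_2}$ and $\TAPP{w'}{T_2'}$ are related by $\E_1(\psi,\Ac)$, and symmetrically against $\E_2$. Since $\TAPP{(\TLAM{l_1}{l_2}U{t})}{T_2}\redd t[T_2/U]$, weak head expansion of the logical relations reduces this to relating $t[\phi][\sigma, T_2/U][\delta]$ and $t'[\phi',\sigma',T_2'/U][\delta']$ (up to the necessary weakenings). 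I would obtain that by instantiating the hypothesis $\lstyequiv[\Psi, U : \DTyp[\Delta] \proglevel{l_1}]\metalevel{t}{t'}{T}{l_2}$ with the \emph{extended} related global substitution $\sigma, T_2/U$ versus $\sigma', T_2'/U$; proving this pair related by $\ldSgsubeq$ needs precisely $\ldSctxeqge[\Phi'][L'']\proglevel\proglevel{\Delta[\phi][\sigma]}{\Delta[\phi'][\sigma']}$ (read off from $\lsgctx{\Psi, U : \DTyp[\Delta]\proglevel{l_1}}$) together with $\Ac$ (and its mirror over $\Delta[\phi'][\sigma']$, obtained from $\Ac$ by irrelevance), after naturality rewrites such as $T[\phi][\sigma, T_2/U][\delta] = T[\phi][\sigma][\delta][T_2/U]$. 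A final appeal to the irrelevance lemma aligns the resulting derivation with $\E_1(\psi,\Ac)$, and symmetry of the generic and logical relations discharges the $\E_2$ half.

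The main obstacle I anticipate is the bookkeeping around this extended global substitution: making the annotation context carried by $\Ac$ agree with what the $\ldSgsubeq$-extension step requires, threading the right naturality and substitution-commutativity equalities so that $T[\phi][\sigma, T_2/U][\delta]$ is \emph{literally} the type the $\TPI$ clause relates, and handling the symmetric $\E_2$ obligation where $T_2$ and $T_2'$ swap roles. Conceptually this is the same dance already carried out for the $\TPI$ semantic rule and the $\LAM$ lemma, but it is exactly the place where a mis-indexed weakening or a missing side hypothesis could slip in.
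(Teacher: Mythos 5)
Your proposal is correct and takes essentially the same approach as the paper, which gives this lemma a one-line proof: extend the related global substitutions with $T_2/U$ and $T'_2/U$ and feed that to the hypothesis on the body. Your more detailed account of the weak-head expansion, the construction of $\D$ via the $\TPI$ rule, and the symmetric $\E_2$ obligation is all consistent with what the paper's proof is implicitly relying on.
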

\begin{proof}
  In this proof, we are asked to prove two symmetric proof which is essentially to
  show that the results of applying ${\TLAM{l_1}{l_2}U t}$ and
  ${\TLAM{l_3}{l_4}U{t'}}$ are related. %
  This is immediate by extending related global substitutions and applying them to
  $\lstyequiv[\Psi, U : \DTyp[\Delta] \proglevel{l_1}]\metalevel{t}{t'}{T}{l_2}$. %
\end{proof}

\begin{lemma}
  \begin{mathpar}
    \inferrule
    {\lstyequiv \metalevel{t}{t'}{\TPI U[\Delta] l{l'}{T''}}{l'} \\ \lstypeq[\Psi][\Delta] \proglevel{T}{T'}l}
    {\lstyequiv \metalevel{\TAPP t{T}}{\TAPP{t'}{T'}}{T''[T/U]}{l'}}
  \end{mathpar}
\end{lemma}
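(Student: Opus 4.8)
The plan is to transcribe the argument used for the semantic rules of $\TLAM$ and $\APP$ proved just above, adapted to the type-indexed meta-function. First I would extract the structural data: from $\lstyequiv \metalevel{t}{t'}{\TPI U[\Delta] l{l'}{T''}}{l'}$ one reads off $\lsgctx \Psi$, $\lsctxwf \metalevel \Gamma$, and (built into the definition of $\lstyequiv$) $\lstypwf \metalevel{\TPI U[\Delta] l{l'}{T''}}{l'}$. The side condition $\lstypwf \metalevel{T''[T/U]}{l'}$ demanded by the definition of the target $\lstyequiv$ I would discharge by recovering $\lttypwf[\Psi, U : \DTyp[\Delta] \proglevel l] \metalevel{T''}{l'}$ (presupposition followed by syntactic inversion on the first premise), promoting it to $\lstypwf[\Psi, U : \DTyp[\Delta] \proglevel l] \metalevel{T''}{l'}$ by the fundamental theorem, and then applying the semantic global-substitution lemma along the global substitution $\id_\Psi, T/U$, which is itself semantically well-formed by the reflexive semantic global identity (a special case of \Cref{lem:dt:sem:gwk}) together with the semantic rule for extending a global substitution by a type, fed with $\lstypeq[\Psi][\Delta] \proglevel T{T'} l$.

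Next I would fix $\tyequiv[L']{\phi}{\phi'}L$, $\ldSgsubeq[\Phi][L'] \sigma{\sigma'}{\Psi[\phi]}$ and $\ldsubeq[\Phi][\Delta_1][L'] \metalevel \metalevel \delta{\delta'}{\Gamma[\phi][\sigma]}$ (only $k = \metalevel$ occurs here). Instantiating the first premise yields a logical-relation derivation $\D$ of the normal type $\TPI U[\Delta[\phi][\sigma]]{l[\phi]}{l'[\phi]}{(T''[\phi][\sigma,U^\id/U][\delta])}$ together with $\ldtyequiv[\Phi][\Delta_1][L'] \metalevel \metalevel{t[\phi][\sigma][\delta]}{t'[\phi'][\sigma'][\delta']}{\D}$; unfolding $\D$ produces weak head normal forms $w$, $w'$ of $t[\phi][\sigma][\delta]$ and $t'[\phi'][\sigma'][\delta']$ (with the usual reductions recorded by $\ttrmreds$) and, crucially, the stability clause $\E_1$. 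I would then instantiate $\E_1$ with the identity weakening and with the datum $\Ac :: \ldStypeqge[\Phi][\Delta[\phi][\sigma]][L'] \proglevel \proglevel{T[\phi][\sigma]}{T'[\phi'][\sigma']}{l[\phi]}$, which is exactly what $\lstypeq[\Psi][\Delta] \proglevel T{T'} l$ hands me once $\phi, \phi', \sigma, \sigma'$ are fixed --- the quantifiers over $k \ge \proglevel$ and over local substitutions in $\lstypeq$ and in $\ldStypeqge$ line up, so no appeal to layering restriction is needed, since the $\TPI$ relation already demands the $\ge \proglevel$ notion for its type argument. This produces $\ldtyequiv[\Phi][\Delta_1][L'] \metalevel \metalevel{\TAPP{w}{T[\phi][\sigma]}}{\TAPP{w'}{T'[\phi'][\sigma']}}{\E_1(\id,\Ac)}$.

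Finally I would close the argument with the weak head expansion lemma for the logical relations. Since $\TAPP t T$ transports under substitutions as $\TAPP t T[\phi][\sigma][\delta] = \TAPP{(t[\phi][\sigma][\delta])}{(T[\phi][\sigma])}$ --- local substitutions do not enter the type argument of $\TAPP$ --- and the reduction congruence rule for $\TAPP$ together with stability of reduction under substitutions lifts $t[\phi][\sigma][\delta] \reds w$ to $\TAPP{(t[\phi][\sigma][\delta])}{(T[\phi][\sigma])} \reds \TAPP{w}{T[\phi][\sigma]}$, and symmetrically on the primed side, weak head expansion upgrades the relation just obtained to one for $(\TAPP t T)[\phi][\sigma][\delta]$ and $(\TAPP{t'}{T'})[\phi'][\sigma'][\delta']$. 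It then remains to recognise $\E_1(\id,\Ac)$ as a witness for the result type: an equation identifying $(T''[T/U])[\phi][\sigma][\delta]$ with the left-hand type of $\E_1(\id,\Ac)$ follows from the algebra of substitutions in \Cref{lem:dt:interact} and the naturality equations for global, local and universe substitutions (commuting $[\phi]$, $[\sigma]$ and $[\delta]$ past the global substitution $[T/U]$, using that $U$ and the variables of $\Gamma$ do not interfere), and right irrelevance then transports the relation onto the derivation of $\lstypwf \metalevel{T''[T/U]}{l'}$ built in the first step. I expect this substitution bookkeeping --- getting $[\phi][\sigma][\delta]$ to commute past $[T/U]$ in precisely the shape the $\TPI$ clause prescribes --- together with the disciplined use of irrelevance to reconcile the two logical-relation derivations of the result type, to be the only genuine obstacle; everything else is a direct copy of the $\TLAM$ and $\APP$ proofs above.
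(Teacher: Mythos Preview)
Your approach matches the paper's: unfold the logical relation at $\TPI$, instantiate the $\E_1$ clause with the related-types datum supplied by the second premise, and close with weak head expansion, substitution algebra, and irrelevance. The paper compresses all of this into a single sentence (``use the semantics of related terms of type $\TPI U[\Delta] l{l'}{T''}$''), but your unpacking is faithful.

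One step would fail as written. You propose discharging the side condition $\lstypwf \metalevel{T''[T/U]}{l'}$ by recovering $\lttypwf[\Psi, U : \DTyp[\Delta] \proglevel l] \metalevel{T''}{l'}$ syntactically and then \emph{promoting it via the fundamental theorem}. But this lemma is one of the semantic rules being assembled \emph{into} the fundamental theorem, so appealing to the fundamental theorem here is circular. The fix is already latent in your own argument: the first premise carries $\lstypwf \metalevel{\TPI U[\Delta] l{l'}{T''}}{l'}$, and for each choice of $\phi,\sigma,\delta$ this unfolds to a derivation whose $\E_1$ clause, fed with the $\ldStypeqge$ datum from $\lstypeq[\Psi][\Delta] \proglevel{T}{T'}l$ exactly as you do for the term part, yields the required $\ldtypeq$ witness for $T''[T/U][\phi][\sigma][\delta]$ after the same substitution commutation. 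In other words, the type-level side condition and the term-level relation are both read off from the $\TPI$ clause in parallel; no detour through the global-substitution lemma or the fundamental theorem is needed.
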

\begin{proof}
  We obtain the goal quite easily by using the semantics of related terms of type
  \break $\TPI U[\Delta] l{l'}{T''}$.
\end{proof}

\begin{lemma}
  \begin{mathpar}
    \inferrule
    {\lsctxwf \metalevel \Gamma \\ \lstyping[\Psi, U : \DTyp[\Delta] \proglevel l]\metalevel{t}{T'}{l'} \\ \typing[L]{l}\Level \\ \typing[L]{l'}\Level \\ \lstypwf[\Psi][\Delta] \proglevel{T} l}
    {\lstyequiv \metalevel{t[T/U]}{\TAPP{(\TLAM{l}{l'}U t)}{T}}{T'[T/U]}{l'}}

    \inferrule
    {\lstyping \metalevel{t}{\TPI U[\Delta] l{l'}{T'}}{1 + l \sqcup l'}}
    {\lstyequiv \metalevel{\TLAM{l}{l'}U{(\TAPP t{U^\id})}}{t}{\TPI U[\Delta] l{l'}{T'}}{l'}}
  \end{mathpar}
\end{lemma}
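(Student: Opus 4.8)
The plan is to prove both the $\beta$ and $\eta$ rules for $\TPI$ by the recipe already used for the dependent-function $\beta$/$\eta$ lemmas and the earlier meta-function lemmas: unfold the relevant semantic judgment, fix substitution data $\tyequiv[L']{\phi}{\phi'}{L}$, $\ldSgsubeq[\Phi][L']{\sigma}{\sigma'}{\Psi[\phi]}$ and a related local substitution $\ldsubeq[\Phi][\Delta'][L']\metalevel\metalevel{\delta}{\delta'}{\Gamma[\phi][\sigma]}$ (at $i = \metalevel$ the quantified $k$ is forced to $\metalevel$), and then reduce the goal to facts about the Kripke logical relation at $i = j = \metalevel$, appealing to the reduction-expansion and weak-head-expansion lemmas to absorb the single $\redd$ step that separates the two sides. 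The side conditions $\lsgctx\Psi$, $\lsctxwf\metalevel\Gamma$ and the semantic well-formedness of the result type are obtained from presupposition and the hypotheses, exactly as in the neighbouring lemmas.

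For the $\beta$ rule, I would first turn $\lstypwf[\Psi][\Delta]\proglevel{T}{l}$ into a semantically related global substitution. The corollary on semantic global weakening substitutions (\Cref{lem:dt:sem:gwk}), specialised to the empty extension, gives $\lsgsubst{\id}{\Psi}$, and the semantic rule for $\DTyp$-extensions of global substitutions then yields $\lsgsubst{(\id, T/U)}{(\Psi, U : \DTyp[\Delta]\proglevel{l})}$. Applying the semantic global substitution lemma to $\lstyping[{\Psi, U : \DTyp[\Delta]\proglevel{l}}]\metalevel{t}{T'}{l'}$ produces $\lstyping\metalevel{t[T/U]}{T'[T/U]}{l'}$, using that $\Gamma$ does not mention $U$ so the substitution acts trivially on it. Since $\TAPP{(\TLAM{l}{l'}U t)}{T} \redd t[T/U]$ and this reduction is stable under all universe, global and local substitutions, the reduction-expansion lemma upgrades this to the desired $\lstyequiv\metalevel{t[T/U]}{\TAPP{(\TLAM{l}{l'}U t)}{T}}{T'[T/U]}{l'}$; the coherence equations needed to commute the ambient $\phi$, $\sigma$, $\delta$ past $[T/U]$ are instances of \Cref{lem:dt:interact} and the commutativity lemmas for substitutions.

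For the $\eta$ rule, after fixing the substitution data I would extract from $\lstyping\metalevel{t}{\TPI U[\Delta]{l}{l'}{T'}}{l'}$ the logical relation between $t[\phi][\sigma][\delta]$ and $t[\phi'][\sigma'][\delta']$ at the substituted $\TPI$ type; unfolding it produces a weak-head normal form of that type to which $t[\phi][\sigma][\delta]$ reduces, together with the per-type-argument clause. The left-hand side $\TLAM{l}{l'}U{(\TAPP{t}{U^\id})}$ is already a weak-head normal form, so it remains to verify the per-argument clause: for any weakening $\psi$ and any $\ldStypeqge[\Phi][\Delta][L']\proglevel\proglevel{T_2}{T_2'}{l}$ we have $\TAPP{(\TLAM{l}{l'}U{(\TAPP{t[\phi][\sigma][\delta]}{U^\id})})}{T_2} \redd \TAPP{t[\phi][\sigma][\delta]}{T_2}$, and this is related to $\TAPP{t[\phi'][\sigma'][\delta']}{T_2'}$ by feeding the given type relation into the per-argument clause already available for $t$; weak-head expansion of the logical relations then closes the clause. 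The two generic-equivalence components demanded by the definition are supplied by escape applied to the semantics of $t$ together with the $\eta$-shaped congruence law for $\TPI$ in the generic equivalence.

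The main obstacle is, as in the analogous $\TPI$-congruence and dependent-function $\eta$ lemmas, purely bookkeeping: reconciling the type instance $T'[T_2/U]$ carried by the per-argument clause of $t$'s logical relation with the $\D$-indexed type appearing in the definition of the semantic equivalence, which is handled by the left/right irrelevance lemmas, and pushing the ambient $\phi$, $\sigma$, $\delta$ through the inner $[T/U]$ and $[U^\id]$ substitutions via \Cref{lem:dt:interact}. Notably there is no layering subtlety here, since everything lives at $i = j = \metalevel$: unlike the global-variable and code-recursion cases, no lifting or layering-restriction step is needed, so this is one of the lighter fundamental-theorem cases once the substitution algebra is in place.
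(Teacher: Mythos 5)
Your proposal is correct and matches the paper's (terse) argument: the $\beta$ rule is obtained by packaging $T$ as the semantically related global substitution extension $(\id, T/U)$, applying the semantic global substitution lemma to obtain $\lstyping\metalevel{t[T/U]}{T'[T/U]}{l'}$, and then invoking reduction expansion; the $\eta$ rule follows by unfolding the Kripke logical relation at $\TPI$ and discharging the per-argument clause with weak-head expansion, which is exactly what the paper compresses into the phrase ``applying related global substitutions.'' Your account is just a more explicit version of the paper's one-line proof, and your remarks on the substitution bookkeeping (\Cref{lem:dt:interact}, irrelevance) correctly identify the only nontrivial administrative work involved.
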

\begin{proof}
  These two rules are also very simple to prove and follow similar lines to those of
  $\Pi$ types. %
  For the $\beta$ rule, we use reduction expansion. For the $\eta$ rule, we see it by
  noticing applying related global substitutions. %
\end{proof}

The corresponding introduction, elimination, $\beta$ and $\eta$ rules for
meta-functions for local contexts are proved similarly but just simpler. %

Now the only last piece is universe-polymorphic functions.

\begin{lemma}
  \begin{mathpar}
    \inferrule
    {\lstypwf[\Psi][\Gamma][L,\vect\ell] \metalevel T l \\
      \lstypeq[\Psi][\Gamma][L,\vect\ell] \metalevel T{T'} l \\ |\vect \ell| > 0 \\ \tyequiv[L,\vect\ell]l{l'}\Level}
    {\lstypeq \metalevel {\UPI \ell l T}{\UPI \ell{l'}{T'}}{\omega}}
  \end{mathpar}
\end{lemma}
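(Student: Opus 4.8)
The plan is to unfold the definition of the semantic judgment and then build a derivation in the Kripke logical relation using the clause for universe-polymorphic function types. First I would read off $\lsgctx\Psi$ and $\lsctxwf \metalevel \Gamma$ (these are literally components of the definition of $\lstypeq[\Psi][\Gamma][L,\vect\ell] \metalevel T{T'} l$, noting $\typeof\metalevel=\metalevel$ and that $\Gamma$ does not mention $\vect\ell$). Then, having fixed $\tyequiv[L']{\phi}{\phi'}L$, $\ldSgsubeq[\Phi][L'] \sigma{\sigma'}{\Psi[\phi]}$, $k=\metalevel$ (the only layer $\ge\metalevel$), and $\ldsubeq[\Phi][\Delta][L'] \metalevel\metalevel \delta{\delta'}{\Gamma[\phi][\sigma]}$, the goal reduces to a derivation of $\ldtypeq[\Phi][\Delta][L'] \metalevel\metalevel{(\UPI\ell l T)[\phi][\sigma][\delta]}{(\UPI\ell{l'}{T'})[\phi'][\sigma'][\delta']}{\omega}$. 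By the substitution definitions for $\UPI$ (which propagate through the global and local substitutions and extend $\phi$ with $\vect\ell/\vect\ell$), the left-hand side equals $\UPI\ell{l[\phi,\vect\ell/\vect\ell]}{(T[\phi,\vect\ell/\vect\ell][\sigma][\delta])}$, and symmetrically on the right.

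Next I would feed the $\UPI$ rule of the logical relation. Its weak-head-reduction premises are immediate, since $\UPI\ell{\cdot}{\cdot}$ is already a weak head normal form for types. The syntactic well-formedness premises $\lttypwf[\Psi][\Gamma][L,\vect\ell]\metalevel{T_1}l$ on the two bodies and the generic equivalence $\lttypgeq\metalevel{\UPI\ell l{T_1}}{\UPI\ell l{T_1'}}\omega$ follow by chaining the escape lemmas on $\lstypwf[\Psi][\Gamma][L,\vect\ell]\metalevel T l$ and $\lstypeq[\Psi][\Gamma][L,\vect\ell]\metalevel T{T'}l$ (plus symmetry for the $T'$ side) to land in the generic equivalence, and then applying the $\UPI$ type-constructor law at layer $\metalevel$; the side condition $|\vect\ell|>0$ carries over verbatim.

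The substance is the premise $\E$. Given a weakening $\psi :: L'' \sep \Phi';\Delta'' \To_m L' \sep \Phi;\Delta$ with universe component $\theta$, vectors $\vect m,\vect m'$ of the same length as $\vect\ell$, and $\tyequiv[L'']{\vect m(n)}{\vect m'(n)}\Level$ for every $n$, I must produce $\ldtypeq[\Phi'][\Delta''][L'']\metalevel\metalevel{T_1[\vect m/\vect\ell]}{T_1'[\vect m'/\vect\ell]}{l[\phi,\vect\ell/\vect\ell][\vect m/\vect\ell]}$, where $T_1$ is $T[\phi,\vect\ell/\vect\ell][\sigma][\delta]$ weakened along $\psi$. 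The idea is to instantiate the hypothesis $\lstypeq[\Psi][\Gamma][L,\vect\ell]\metalevel T{T'}l$ with the universe substitution $\phi'' := (\phi\circ\theta,\vect m/\vect\ell)$ paired with $\phi''' := (\phi'\circ\theta,\vect m'/\vect\ell)$ — equivalent over $L''$ because $\phi\approx\phi'$ is preserved under composition with $\theta$ and $\vect m\approx\vect m'$ pointwise — together with the global substitution $\sigma[\theta]$ weakened along the global component of $\psi$ (still semantically related by the weakening and universe-substitution lemmas for $\ldSgsubeq$), and the correspondingly weakened local substitutions $\delta,\delta'$. The instantiation yields $\ldtypeq\metalevel\metalevel{T[\phi''][\cdots][\cdots]}{T'[\phi''']{\cdots}}{l[\phi'']}$, and it remains to match this to the goal by pure algebra: using \Cref{lem:dt:interact}, the algebra of universe substitutions, and naturality of substitutions, $T[\phi,\vect\ell/\vect\ell][\sigma][\delta]$ composed with $\psi$ and then $[\vect m/\vect\ell]$ coincides with $T[\phi''][\cdots][\cdots]$, and likewise $l[\phi,\vect\ell/\vect\ell][\vect m/\vect\ell]=l[\phi'']$.

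I expect this last bookkeeping to be the main obstacle: keeping straight how the eliminator's level substitution $[\vect m/\vect\ell]$ and the weakening $\psi$ interleave with the ambient triple $(\phi,\sigma,\delta)$, and in particular verifying that $\sigma[\theta]$ and $\delta$ stay semantically related after both a universe substitution and a global (resp.\ local) weakening — which requires the monotonicity and universe-substitution lemmas for $\ldSgsubeq$ and $\ldsubeq$ established earlier — and that the constructed derivation lands at exactly the universe level the $\UPI$ rule demands, which leans on the composition lemma for universe substitutions and on stability of level equivalence under them. The $\omega$-level, transfinite-recursion aspect poses no difficulty here, since we never recurse on the level of $\UPI$ itself; we only invoke the already-available logical relation at the strictly smaller levels $l[\vect m/\vect\ell]$.
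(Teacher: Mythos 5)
Your proposal is correct and takes essentially the same route as the paper's own proof, which is nearly a one-liner: after unfolding the semantic judgment and fixing $\phi\approx\phi'$, $\sigma\approx\sigma'$, $\delta\approx\delta'$, the paper simply observes that the semantic premise over $L,\vect\ell$ already discharges the $\E$ component of the $\UPI$ clause (via the extended universe substitution you spell out), and notes the same well-foundedness point about $l[\phi,\vect l/\vect\ell]$ being strictly below $\omega$. The only cosmetic difference is that the paper's sketch names $\lstypwf$ where, as you rightly identify, it is really the $\lstypeq$ premise that supplies the cross-relation between $T$ and $T'$ needed for $\E$.
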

\begin{proof}
  Assuming $\tyequiv[L']{\phi}{\phi'}L$,
  $\ldSgsubeq[\Phi][L'] \sigma{\sigma'}{\Psi[\phi]}$ and
  $\ldsubeq[\Phi][\Delta'][L'] \metalevel \metalevel \delta{\delta'}{\Gamma[\phi][\sigma]}$, then
  we should prove
  \[
    \ldtypeq[\Phi][\Delta'][L'] \metalevel \metalevel{\UPI \ell l T[\phi][\sigma][\delta]}{\UPI
      \ell{l'}{T'}[\phi'][\sigma'][\delta']}{\omega}
  \]
  Looking at the semantics of universe-polymorphic functions, we see that
  $\lstypwf[\Psi][\Gamma][L,\vect\ell] \metalevel T l$ has already provided the goal. %
  It is important to see that $\typing[L']{l[\phi, \vect l/\vect \ell]}\Level$ if all
  universe levels in $\vect l$ are well-formed so it is strictly smaller than
  $\omega$. %
  Therefore, we still have a well-founded semantics. 
\end{proof}

\begin{lemma}
  \begin{mathpar}
    \inferrule
    {\lstyequiv[\Psi][\Gamma][L, \vect \ell]\metalevel{t}{t'}{T}{l} \\ |\vect \ell| > 0 \\ \tyequiv[L,\vect\ell]l{l'}\Level}
    {\lstyequiv \metalevel {\ULAM l \ell t}{\ULAM {l'} \ell{t'}}{\UPI \ell l T}{\omega}}

    \inferrule
    {\lstypwf[\Psi][\Gamma][L,\vect\ell] \metalevel T l \\
      \lstyequiv \metalevel{t}{t'}{\UPI \ell l T}{\omega} \\ |\vect\ell| = |\vect l| = |\vect
      l'| > 0 \\
      \forall 0 \le n < |\vect l| ~.~ \tyequiv[L]{\vect l(n)}{\vect l'(n)}\Level}
    {\lstyequiv \metalevel{\UAPP t l}{t'~\$~\vect l'}{T[\vect l/\vect \ell]}{l[\vect l/\vect \ell]}}

    \inferrule
    {\lsctxwf \metalevel \Gamma \\ \lstyping[\Psi][\Gamma][L, \vect \ell]\metalevel{t}{T}{l} \\ \typing[L,\vect\ell]l\Level \\
      |\vect\ell| = |\vect l| > 0 \\ \forall l' \in \vect l ~.~ \typing[L]{l'}\Level}
    {\lstyequiv \metalevel{t[\vect l/\vect \ell]}{(\ULAM l \ell t)~\$~\vect l}{T[\vect l/\vect \ell]}{l[\vect l/\vect \ell]}}

    \inferrule
    {\lstyping \metalevel{t}{\UPI \ell l T}{\omega}}
    {\lstyequiv \metalevel{\ULAM l \ell{(t~\$~\vect\ell)}}{t}{\UPI \ell l T}{\omega}}
  \end{mathpar}
\end{lemma}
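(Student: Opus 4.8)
The plan is to finish the Fundamental Theorem by treating these four rules exactly as the preceding semantic-rule lemmas: unfold the relevant semantic judgment, fix $\tyequiv[L']{\phi}{\phi'}L$, $\ldSgsubeq[\Phi][L'] \sigma{\sigma'}{\Psi[\phi]}$ and — since $i = \metalevel$ is the only layer in play here — $\ldsubeq[\Phi][\Delta][L'] \metalevel \metalevel \delta{\delta'}{\Gamma[\phi][\sigma]}$, and then build the required Kripke logical relation using the $\UPI$ case of \Cref{sec:dt:logrel}. For the congruence rule for $\ULAM$, I would first extract $\lsgctx \Psi$, $\lsctxwf \metalevel \Gamma$ and $\lstypwf \metalevel{\UPI \ell l T}{\omega}$ from the premises (the last via the immediately preceding lemma), so that $\lstypwf \metalevel{\UPI \ell l T}{\omega}$ supplies the reduction, typing and generic-equivalence components of the target derivation $\D :: \ldtypeq[\Phi][\Delta][L'] \metalevel \metalevel{(\UPI \ell l T)[\phi][\sigma][\delta]}{(\UPI \ell{l'}{T'})[\phi'][\sigma'][\delta']}{\omega}$. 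The only real content is the $\E$ component of $\D$ and the matching clause for terms: given a weakening $\psi$ and level lists $\vect l$, $\vect l'$ of the right length with $\vect l(n)$ and $\vect l'(n)$ equivalent for all $n$, I apply weak head expansion using $(\ULAM l \ell t)~\$~\vect l \redd t[\vect l/\vect \ell]$ and the symmetric reduction, and then invoke $\lstyequiv[\Psi][\Gamma][L, \vect \ell]\metalevel{t}{t'}{T}{l}$ with the \emph{extended} universe substitutions $\phi, \vect l/\vect \ell$ and $\phi', \vect l'/\vect \ell$ — which are related because $\phi \approx \phi'$ and the lists are componentwise related — keeping $\sigma, \sigma', \delta, \delta'$ fixed. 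This yields precisely the relation at $T[\vect l/\vect \ell][\phi][\sigma][\delta]$ and universe level $l[\vect l/\vect \ell][\phi]$, as required by the definition of $\D$.

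The congruence rule for $\UAPP$ is then nearly immediate: unfolding $\lstyequiv \metalevel{t}{t'}{\UPI \ell l T}{\omega}$ gives, for the weak head reducts $w$, $w'$ of $t$, $t'$ and for componentwise-related level lists, exactly the relation between $\UAPP w{l}$ and $w~\$~\vect l'$ under the induced derivation $\E$; combined with the reduction and typing premises (the latter from preservation) this closes the goal. The $\beta$ rule follows from the reduction expansion lemma together with the semantic universe substitution lemma: $(\ULAM l \ell t)~\$~\vect l \redd t[\vect l/\vect \ell]$, and from $\lstyping[\Psi][\Gamma][L, \vect \ell]\metalevel{t}{T}{l}$ the semantic universe substitution lemma applied with $\id_L, \vect l/\vect \ell$ produces $\lstyping \metalevel{t[\vect l/\vect \ell]}{T[\vect l/\vect \ell]}{l[\vect l/\vect \ell]}$, against which reduction expansion gives the rule. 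For the $\eta$ rule, $\ULAM l \ell{(t~\$~\vect\ell)}$ is already a weak head normal form and $t$ reduces to some $w$; the generic equivalence between the two is supplied by the $\eta$ law of the generic equivalence, and for the application clause we note $(\ULAM l \ell{(t~\$~\vect\ell)})~\$~\vect l \redd t[\vect l/\vect \ell]~\$~\vect l$, which by weak head expansion against the semantics of $t$ instantiated at $\vect l$ is related to the corresponding reduct of $t~\$~\vect l$.

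The main obstacle is not any single calculation but the well-foundedness of the logical relation: every time we instantiate the $\omega$-level $\UPI$ relation at a concrete list $\vect l$ we descend to universe level $l[\vect l/\vect \ell]$, which must lie strictly below $\omega$. This is exactly why each of the four rules carries the side condition $\typing[L]{l'}\Level$ for every $l' \in \vect l$ (so no component is $\omega$), in keeping with the transfinite-recursion discussion of \Cref{sec:dt:klogrel}; I would make sure this condition is threaded through each appeal to the $\UPI$ clause. The remaining friction is bookkeeping: showing that the extended universe substitution $(\phi, \vect l/\vect \ell)$ commutes correctly with the Kripke weakening $\psi$ and with $\sigma$ and $\delta$, so that equalities such as $T[\vect l/\vect \ell][\phi][\sigma][\delta] = T[\phi, \vect l/\vect \ell][\sigma][\delta]$ line up with the shape the definition of $\D$ demands — routine given \Cref{lem:dt:interact} and the semantic substitution lemmas, but to be stated with care.
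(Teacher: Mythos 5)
Your proposal is correct and takes essentially the same approach as the paper's (very terse) proof: congruence for $\ULAM$ and $\UAPP$ by unfolding the semantic $\UPI$ relation, $\beta$ by reduction expansion plus the semantic universe-substitution lemma, and $\eta$ by instantiating at arbitrary componentwise-equivalent level lists. You correctly flag the two real points — threading the $\vect l(n) \neq \omega$ side conditions through every $\UPI$ instantiation to preserve well-foundedness, and the substitution-commutativity bookkeeping that lets $T[\phi,\vect l/\vect\ell][\sigma][\delta]$ line up with $T[\phi][\sigma][\delta][\vect l/\vect\ell]$ — so no gap. One small nit: there is no separately stated ``$\eta$ law'' in the generic-equivalence interface; what you invoke there is the extensionality-style Congruence law for $\UPI$ (``if applications at fresh universe variables are related, so are the functions''), which serves the same role.
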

\begin{proof}
  All these rules are relatively simple. %
  The congruence rule for introduction can be proved by following the definition of
  related terms of type ${\UPI \ell l T}$.

  The congruence rule for elimination can be proved by using the definition of related
  terms of type ${\UPI \ell l T}$.

  The $\beta$ rule can be derived from using reduction expansion.

  The $\eta$ rule can be given after applying arbitrary equivalent universe
  substitutions. %
\end{proof}

At this point, we have proved all semantic rules and thus the fundamental theorems
hold.

\section{Consequences and Decidability of Convertibility}

In the previous section, we have established the fundamental theorems, and using the
escape lemma, we see that all syntactic equivalent types and terms are also
semantically related by the Kripke logical relations. %
In this section, we will instantiate the generic equivalence so that we can obtain
desired properties that are difficult to prove syntactically. %

\subsection{First Instantiation: Syntactic Equivalence}

First we instantiate the generic equivalence with syntactic equivalences of types and
terms. %
The laws are all easily instantiated by the corresponding equivalence rules. %
We also see that the derived equivalences between local contexts and local
substitutions are equivalent to the corresponding syntactic equivalences. %
From the fundamental theorems and the escape lemma, we are able to derive the
following lemmas.

\begin{lemma}[Injectivity of Type Constructors] $ $
  \begin{itemize}
  \item If $\lttypeq i{\PI{l}{l'} x S T}{\PI{l}{l'} x{S'}{T'}}{l \sqcup l'}$ and
    $i \in \{\proglevel, \metalevel\}$, then $\lttypeq i{S}{S'}l$ and
    $\lttypeq[\Psi][\Gamma, x : S \at l]i{T}{T'}{l'}$.
  \item If
    $\lttyequiv i{\PI{l}{l'} x s t}{\PI{l}{l'} x{s'}{t'}}{\Ty{l \sqcup l'}}{\su{(l
        \sqcup l')}}$ and $i \in \{\proglevel, \metalevel\}$, then \break
    $\lttyequiv i{s}{s'}{\Ty l}{1 + l}$ and
    $\lttyequiv[\Psi][\Gamma, x : \Elt l s \at l]i{t}{t'}{\Ty{l'}}{1 + l'}$.
  \item If $\lttypeq \metalevel{\CTyp[\Delta]l}{\CTyp[\Delta']l}{0}$, then
    $\lpequiv \proglevel \Delta{\Delta'}$.
  \item If $\lttypeq \metalevel{\CTrm[\Delta] T l}{\CTrm[\Delta'] T l}{0}$, then
    $\lpequiv \proglevel \Delta{\Delta'}$ and $\lttypeq[\Psi][\Delta] \proglevel{T}{T'}l$.
  \item If $\lttypeq \metalevel{\CPI g l T}{\CPI g l{T'}}l$, then
    $\lttypeq[\Psi, g : \Ctx] \metalevel{T}{T'}l$.
  \item If
    $\lttypeq \metalevel{\TPI{U}[\Delta]{l}{l'}{T}}{\TPI{U}[\Delta']{l}{l'}{T'}}{l'}$, then $\lpequiv \proglevel \Delta{\Delta'}$ and
    $\lttypeq[\Psi, U : \DTyp[\Delta] \proglevel l] \metalevel{T}{T'}{l'}$.
  \item If $\lttypeq \metalevel{\UPI \ell l T}{\UPI \ell l{T'}}\omega$, then
    $\lttypeq[\Psi][\Gamma][L, \vect \ell] \metalevel{T}{T'}{l}$.
  \end{itemize}
\end{lemma}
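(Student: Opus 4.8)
The plan is to prove this Injectivity of Type Constructors lemma as a direct consequence of the Fundamental Theorem for semantic judgments, instantiated with the syntactic equivalence. The overall strategy follows the standard pattern from logical-relations proofs: from a syntactic equivalence of two type constructors, pass through the semantic model, use the escape lemma to recover the Kripke logical relation, then invoke determinacy of reduction to conclude that both sides reduce to the same head form, which forces the shared premises of the relation (the $\E$, $\F$ components for $\Pi$, the $\ldSctxeqge$ component for contextual types, etc.) to relate the sub-structures; finally apply escape again to the sub-structures to extract the syntactic equivalences we want.

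Concretely, I would proceed case by case. For the $\Pi$-type case: from $\lttypeq i{\PI{l}{l'} x S T}{\PI{l}{l'} x{S'}{T'}}{l \sqcup l'}$ with $i \in \{\proglevel, \metalevel\}$, the Fundamental Theorem gives $\lstypeq i{\PI{l}{l'} x S T}{\PI{l}{l'} x{S'}{T'}}{l \sqcup l'}$, and the semantic escape lemma (instantiating the identity universe, global, and local substitutions via the reflexivity lemmas for identity substitutions) yields $\D :: \ldtypeq i{\typeof i}{\PI{l}{l'} x S T}{\PI{l}{l'} x{S'}{T'}}{l \sqcup l'}$. Since a $\Pi$-type is already a weak head normal form, determinacy of reduction forces $\D$ to be (up to the universe-equivalence wrapper) the $\Pi$-case of the logical relation, so its premises $\E$ and $\F$ are available. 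Instantiating $\E$ with the identity weakening gives $\ldtypeq i{\typeof i}{S}{S'}{l}$, and applying the Escape lemma for the logical relations produces $\lttypgeq i S{S'}{l}$, whence by the Subsumption law $\lttypeq i S{S'} l$. For $T$ and $T'$: we need a related term $s$ in $\E(\id)$ to feed into $\F$; we use the reflexivity-of-neutral lemma on the variable $x$ (which is a neutral of type $S$ in the extended context $\Gamma, x : S \at l$), giving $\ldtyequiv i{\typeof i} x x{\E(\id)}$, then $\F(\id, \cdot)$ yields $\ldtypeq i{\typeof i}{T[x/x]}{T'[x/x]}{l'}$, i.e. $\ldtypeq i{\typeof i}{T}{T'}{l'}$ in context $\Gamma, x : S \at l$, and Escape plus Subsumption finishes it. The encoding-$\PI$ case (with $\Ty{l \sqcup l'}$) is handled identically, except we first weak-head-expand through the universe case of the logical relation: the term-level equivalence at type $\Ty{l\sqcup l'}$ unfolds to an equivalence of the decoded types $\Elt{l\sqcup l'}{\PI{l}{l'}x s t}$ and $\Elt{l\sqcup l'}{\PI{l}{l'}x{s'}{t'}}$, which reduce to $\PI{l}{l'}x{\Elt l s}{\Elt{l'} t}$ etc., and then the same $\Pi$-case analysis applies, reading off $\Elt l s \approx \Elt l{s'}$ and hence $s \approx s' : \Ty l$ by inverting the $\tEl$-congruence; the analogous step in the extended context gives the $t \approx t'$ conclusion.

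For the contextual-type and meta-function cases, the argument is simpler since there are no binder-dependency complications. From $\lttypeq \metalevel{\CTyp[\Delta]l}{\CTyp[\Delta']l}{0}$ we get via Fundamental plus escape a derivation $\D :: \ldtypeq \metalevel \metalevel{\CTyp[\Delta]l}{\CTyp[\Delta']l}{0}$; since $\CTyp$ is a weak head normal form, determinacy forces $\D$ into the contextual-type case, whose premise $\ldSctxeqge \proglevel \proglevel \Delta{\Delta'}$ we extract and instantiate at the identity weakening and at layer $k = \proglevel$ to get $\ldctxeq \proglevel \proglevel \Delta{\Delta'}$, then apply the Escape lemma for related local contexts and Subsumption to obtain $\lpequiv \proglevel \Delta{\Delta'}$. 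The $\CTrm$ case additionally carries $\ldStypeqge \proglevel \proglevel{T}{T'}l$, which we instantiate (identity weakening, $k=\proglevel$, identity local substitution via reflexivity of local identity) and escape to get $\lttypeq[\Psi][\Delta]\proglevel T{T'}l$. The $\CPI$, $\TPI$, and $\UPI$ cases follow the same template: weak head normal form $\Rightarrow$ determinacy pins the relation case $\Rightarrow$ the $\E$-premise relates the bodies under an appropriate weakening/instantiation (a context variable $g$, a semantically well-formed type $U$, or universe-level instantiations $\vect\ell$) $\Rightarrow$ escape and subsumption. For $\TPI$ we also read off $\lpequiv \proglevel \Delta{\Delta'}$ from the $\ldSctxeqge$ premise exactly as in the $\CTrm$ case, and for $\UPI$ instantiating $\E$ with the identity universe-substitution $\vect\ell/\vect\ell$ gives the body equivalence in $L, \vect\ell$.

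The main obstacle I anticipate is the bookkeeping in the $\Pi$ (and encoding-$\Pi$) cases around the ``type annotation'' slots. The logical relation for $\Pi$ at the term level, and in particular the $\F$-premise at the type level, is stated with extra equivalence hypotheses $\lttypgeq i{S_1}{S_1'}l$ etc. that remove the effect of type annotations, and one must be careful to supply those (using reflexivity of the generic equivalence, which follows from PER plus presupposition) so that the instantiation typechecks; moreover the variable $x$ used as the neutral witness must be checked to inhabit the right context with the right type after all the identity substitutions are unwound, and the resulting equivalence of $T$ and $T'$ is stated in the extended context $\Gamma, x : S \at l$ rather than $\Gamma, x : S' \at l$, so one appeals to the context-equivalence law (since $S \approx S'$) to land exactly in the form the lemma demands. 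None of this is deep, but it is the place where a careless instantiation would fail to compile; everything else is a mechanical ``escape, invert by determinacy, escape again'' loop.
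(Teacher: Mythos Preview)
Your proposal is correct and matches the paper's approach exactly: the paper's proof is the single sentence ``By fundamental theorems, we pass in all identity substitutions and then we can extract this fact from the logical relations of types,'' and your elaboration (fundamental theorem $\to$ escape with identities $\to$ determinacy pins the constructor case $\to$ read off the $\E$/$\F$/context premises $\to$ escape again) is precisely what that sentence means.

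One small point: in the encoding-$\Pi$ case you write ``hence $s \approx s' : \Ty l$ by inverting the $\tEl$-congruence,'' but that inversion is itself an injectivity statement (of $\tEl$) not available as a primitive rule. It is, however, extractable from the same logical-relations data you already have: the relation $\ldtypeq i j{\Elt l s}{\Elt l{s'}}{l}$ forces $s$ and $s'$ to reduce to weak head normal forms of matching shape, and the generic-equivalence premises stored in each case (for $\Nat$, $\Ty{\_}$, nested $\Pi$, or neutral) reassemble into $\lttyequiv i s{s'}{\Ty l}{1+l}$ via the congruence and weak-head-closure laws. This is no more work than the paper's own proof commits to, so it is not a gap, just a place where your phrasing undersells what actually has to happen.
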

\begin{proof}
  By fundamental theorems, we pass in all identity substitutions and then we can
  extract this fact from the logical relations of types. 
\end{proof}

We follow \citet{hu_jang_pientka_2023} and give us the consistency proof of \delamlang.
\begin{lemma}[Consistency]
  There is no term $t$ that satisfies this typing judgment:
  \[
    \lttyping[\cdot][\cdot][\ell] \metalevel t{\PI{1 + \ell}{\ell}x{\Ty \ell}x}{1 + \ell}
  \]
\end{lemma}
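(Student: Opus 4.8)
<br>

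The plan is to prove consistency by a standard logical-relations argument: suppose for contradiction that $\lttyping[\cdot][\cdot][\ell] \metalevel t{\PI{1 + \ell}{\ell}x{\Ty \ell}x}{1 + \ell}$ holds, then apply the fundamental theorem to obtain $\lstyping[\cdot][\cdot][\ell] \metalevel t{\PI{1 + \ell}{\ell}x{\Ty \ell}x}{1 + \ell}$, and unfold the semantic judgment to extract a contradiction. Concretely, I would instantiate the semantic judgment with the identity universe substitution $\id_{[\ell]}$, the empty global substitution (which is semantically related to itself since $\cdot \vDash^v \cdot$ in the appropriate sense), the empty local substitution, and take $k = \metalevel$. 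This yields a derivation $\D :: \ldtypeq[\cdot][\cdot][\ell]\metalevel\metalevel{\PI{1+\ell}{\ell}x{\Ty\ell}x}{\PI{1+\ell}{\ell}x{\Ty\ell}x}{1+\ell \sqcup \ell}$ together with $\ldtyequiv[\cdot][\cdot][\ell]\metalevel\metalevel t t{\D}$, i.e., the Kripke logical relation for $\Pi$-types applied to $t$.

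Next I would use the escape lemma and the definition of the logical relation at the $\Pi$-case to apply $t$ to a suitable argument. Since $\ldtypeq[\cdot][\cdot][\ell]\metalevel\metalevel{\Ty\ell}{\Ty\ell}{1+\ell}$ holds (by the universe case of the logical relation, as $\typing[[\ell]]{1+\ell}\Level$), and since the identity neutral-free context is empty, I need a semantically related pair of terms of type $\Ty\ell$ at universe level $1+\ell$. The natural choice is $\Nat$ with itself: $\ldtyequivt[\cdot][\cdot][\ell]\metalevel\metalevel\Nat\Nat{\Ty\ell}{1+\ell}$ holds by the type-constructor law for the generic equivalence together with the logical relation's universe case (noting $\Nat$ decodes to a semantically well-formed type at every level). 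Applying the $\Pi$-relation to this argument (with the trivial weakening) gives $\ldtyequivt[\cdot][\cdot][\ell]\metalevel\metalevel{\APP t{1+\ell}{\ell}x{\Ty\ell}x\Nat}{\APP t{1+\ell}{\ell}x{\Ty\ell}x\Nat}{\Elt\ell\Nat}{\ell}$, since the return type $x$ instantiated at $\Nat$ decodes to $\Elt\ell\Nat$, which reduces to $\Nat$. Hence we obtain a closed semantically-reducible term of type $\Nat$ in the empty context.

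By the escape lemma, this closed term $\APP t{1+\ell}{\ell}x{\Ty\ell}x\Nat$ must reduce to a weak head normal form $w$ with $\ldtynfeqnat[\cdot][\cdot][\ell]\metalevel w w$ and $\lttyping[\cdot][\cdot][\ell]\metalevel w\Nat 0$. Analyzing the $\ldtynfeqnat$ relation: $w$ is either $\ze$, or $\su t'$ for some reducible $t'$, or a neutral term $\nu$. The first two cases pose no issue by themselves — consistency as stated only needs to rule out inhabitation of the specific type $\PI{1+\ell}{\ell}x{\Ty\ell}x$, not $\Nat$ — so I must instead derive the contradiction at the level of the $\Pi$-type directly. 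The cleaner route, following \citet{hu_jang_pientka_2023}, is: instead of applying to $\Nat$, note that the $\Pi$-relation for $\PI{1+\ell}{\ell}x{\Ty\ell}x$ requires that for \emph{every} semantically related pair of types $S_1, S_2$ at level $\ell$, the applications land in the decode of $x$, i.e. in $\Elt\ell{S_1}$ related to $\Elt\ell{S_2}$; taking a neutral type variable is impossible in the empty context, but taking $S_1 = \Nat$, $S_2 = \Ty 0$ (both reducible at level $\ell$ when $\ell$ is non-zero, or more carefully using the absorption laws so these sit at level $\ell$ after suitable instantiation) forces $\APP t \dots \Nat$ and $\APP t \dots{(\Ty 0)}$ to be simultaneously reducible at incompatible neutral-free types, which by the escape lemma and determinacy of reduction is impossible since a single closed normal form cannot inhabit two provably-distinct closed types. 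The main obstacle will be setting up this ``two incompatible arguments'' step carefully: I need two closed types that are semantically reducible at a common universe level $\ell$ but whose decodings are not convertible, and then use injectivity/escape to force the same weak-head-normal $w$ into both, contradicting either the normal-form analysis or determinacy. Getting the universe-level bookkeeping right (so that both arguments genuinely live at level $\ell$, possibly after instantiating $\ell$ to a concrete value like $0$ via the universe-substitution lemma) is the delicate part; everything else is a routine unfolding of the fundamental theorem and the escape lemma.
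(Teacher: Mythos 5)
Your opening is fine: apply the fundamental theorem, pass identity universe/global/local substitutions to unfold the semantic judgment, and land in the $\Pi$-case of the Kripke logical relation. You also correctly notice that applying $t$ to $\Nat$ yields a reducible closed inhabitant of $\Nat$, which is harmless, and that the contradiction must be extracted elsewhere. The repair you then propose does not work, and it misses the mechanism the paper actually uses.

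The ``two incompatible arguments'' step has three problems. First, the $\Pi$-case of the logical relation relates $\APP w{}{}x{S}{T}s$ and $\APP{w'}{}{}x{S'}{T'}{s'}$ only for arguments $s,s'$ that are \emph{semantically related to each other} (one assumes $\ldtyequiv[\Phi][\Delta][L'] i j {s}{s'}{\E(\psi)}$); you cannot feed in an unrelated pair like $\Nat$ and $\Ty 0$. Second, even if you could, you would get two \emph{different} applications, which may reduce to different normal forms, so determinacy of reduction offers no contradiction. Third, under non-cumulativity $\Nat$ is a code at $\Ty 0$ (level $1$) while $\Ty 0$ is a code at $\Ty 1$ (level $2$); they never inhabit a common $\Ty \ell$, and no universe substitution fixes this. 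You dismissed ``taking a neutral type variable is impossible in the empty context'' without noticing that the logical relation is Kripke: you are free to weaken the local context. The paper's proof does exactly that: pass to the context $x : \Ty\ell \at{1+\ell}$ and instantiate the $\Pi$-relation with the neutral pair $x\sim x$. The return type then weak-head reduces to the decoded neutral type $x$, so $t~x$ must reduce to a neutral term $\nu$ inhabiting a neutral type. An induction on $\nu$ shows every neutral in this context ultimately traces back to the sole variable $x$, whose type is $\Ty\ell$, which cannot be related to the distinct normal type $x$ by the logical relations. That is the contradiction; the argument applied to a concrete closed code never produces one.
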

That is, there is not a generic way to construct a term of an arbitrary type.
\begin{proof}
  The lemma effectively asks to reject the following derivation after applying the
  fundamental theorems:
  \[
    \lttyping[\cdot][x : \Ty \ell \at{1 + \ell}][\ell] \metalevel t x \ell
  \]
  Now by the logical relations of the neutral type $x$, we know that $t$ must also be
  neutral, so we now move on to rejecting 
  \[
    \lttyping[\cdot][x : \Ty \ell \at{1 + \ell}][\ell] \metalevel \varlevel x \ell
  \]
  Then we do induction on $\varlevel$. %
  It is impossible to do any operation on $\varlevel$ but to refer to $x$ ultimately, but then
  $x$ has type $\Ty \ell$ which cannot be equivalent to $x$, as they both have reached
  normal forms and it is not possible to relate any two distinguished normal forms in
  any case by the logical relations. %
\end{proof}

\subsection{Second Instantiation: Convertibility Checking}

In this section, we perform our second instantiation by specifying the generic
equivalence to be the convertibility checking judgments. %
This second instantiation is not very immediate for some laws for the generic
equivalence, so we must do some verification, using the results from the first
instantiation. %

\begin{itemize}
\item $\lttypgneeq i V{V'} l$ is $\tconvtypne i V{V'}l$.
\item $\lttypgeq i {T}{T'} l$ is $\tconvtyp i T{T'} l$.
\item $\lttrmgneeq i \nu{\nu'} T l$ is $\tconvtrmne i \nu{\nu'}{T'} l$ and
  $\lttypeq i T{T'} l$.
\item $\lttrmgeq i t{t'} T l$ is $\tconvtrm i t{t'}T l$.
\end{itemize}

Most laws are quite straightforward, except the conversion laws and the PER laws. %
Let us consider the conversion laws first. %
We must show that the convertibility checking algorithm is invariant under contexts
and types. %
Intuitively, this should be true, but it is not very easy to prove, until we obtain
the injectivity lemma after the first instantiation.

\begin{lemma}[Conversion] $ $
  \begin{itemize}
  \item if $\tconvtyp i T{T'} l$, $\gequiv\Phi\Psi$ and
    $\lpequiv[\Phi]{i}\Delta\Gamma$, then $\tconvtyp[\Phi][\Delta] i T{T'} l$.
  \item if $\tconvtypnf i W{W'} l$, $\gequiv\Phi\Psi$ and
    $\lpequiv[\Phi]{i}\Delta\Gamma$, then $\tconvtypnf[\Phi][\Delta] i W{W'} l$.
  \item if $\tconvtypne i V{V'}l$, $\gequiv\Phi\Psi$ and
    $\lpequiv[\Phi]{i}\Delta\Gamma$, then $\tconvtypne[\Phi][\Delta] i V{V'}l$.
  \item if $\tconvctx i \Gamma \Delta$ and $\gequiv\Phi\Psi$, then $\tconvctx[\Phi] i \Gamma \Delta$.
  \item if $\tconvtrm i t{t'}T l$, $\gequiv\Phi\Psi$, $\lpequiv[\Phi]{i}\Delta\Gamma$
    and $\lttypeq[\Phi][\Delta] i{T'} T l$, then
    $\tconvtrm[\Phi][\Delta] i t{t'}{T'} l$.
  \item if $\tconvtrmnf i w{w'}W l$, $\gequiv\Phi\Psi$, $\lpequiv[\Phi]{i}\Delta\Gamma$
    and $\lttypeq[\Phi][\Delta] i{W'}W l$, then $\lttyequiv[\Phi][\Delta] i w{w'}{W'} l$;
  \item if $\tconvtrmne i \nu{\nu'}T l$, $\gequiv\Phi\Psi$ and
    $\lpequiv[\Phi]{i}\Delta\Gamma$, then $\tconvtrmne[\Phi][\Delta] i \nu{\nu'}{T'} l$
    and $\lttypeq i{T'}{T}l$ for some $T'$. 
  \item if $\tconvtrmnee i \nu{\nu'}W l$, $\gequiv\Phi\Psi$ and
    $\lpequiv[\Phi]{i}\Delta\Gamma$, then $\lttyequiv[\Phi][\Delta] i \nu{\nu'}{W'} l$
    and $\lttypeq i{W'}{W}l$ for some $W'$.
  \item if $\tconvsub i{\delta}{\delta'}\Delta$, $\gequiv\Phi\Psi$,
    $\lpequiv[\Phi]{i}{\Gamma'}\Gamma$ and $\lpequiv[\Phi]{i}{\Delta'}\Delta$, then
    $\tconvsub[\Phi][\Gamma'] i{\delta}{\delta'}{\Delta'}$.
  \end{itemize}
\end{lemma}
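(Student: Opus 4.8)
The plan is to prove all nine statements by a single simultaneous mutual induction on the convertibility derivations, pushing the global-context change $\gequiv\Phi\Psi$, the local-context change $\lpequiv[\Phi]i\Delta\Gamma$, and (in the term clauses) the type change from $T$ to an equivalent type all at once. The guiding observation is that the rules of the convertibility algorithm touch the ambient contexts in only four ways: through well-formedness side conditions, through look-ups of local or global assumptions, through sub-derivations in a context extended by a binder, and — only in the neutral-term clauses — through the type that is synthesised and returned. For each rule I re-fire the same rule in $L \sep \Phi;\Delta$ and discharge these four ingredients. Side conditions such as the well-formedness of $\Delta$ in the new contexts come from the presuppositions of the two equivalences (\Cref{lem:dt:presup-gctx} and \Cref{lem:dt:presup1}) together with the Global and Local Context Equivalence lemmas. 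Look-ups are transported by reading off the structure of the equivalences: $x : T \at l \in \Gamma$ yields a matching $x : T' \at{l'} \in \Delta$ with $\lttypeq[\Phi][\Delta]i{T}{T'}l$ and $\tyequiv[L]l{l'}\Level$, and analogously for the two flavours of global assumption under $\gequiv\Phi\Psi$; the induced type mismatch is absorbed by the conversion rule in the type/term clauses, or, in the neutral-term clauses, by simply threading the changed type through as the new output and recording the equivalence. The weak-head-reduction premises survive unchanged because reduction is untyped — only their attached well-formedness judgments need moving, which is preservation plus the Context Equivalence lemmas.

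Two families of cases need more attention. The binder-introducing cases — the $\Pi$ rules in the normal-type and normal-term clauses, the universe-polymorphic and the two meta-function clauses, and the $\tletbox$/recursor clauses — have a sub-derivation living in an extended context; before appealing to the inductive hypothesis I extend $\lpequiv[\Phi]i\Delta\Gamma$ by the same binder on both sides, using reflexivity of type equivalence and well-formedness of the (already processed) domain, or, for $\tletbox$ and the recursors, extend $\gequiv\Phi\Psi$ by the same global binding once the indexing local contexts and types have been transported by the inner inductive hypotheses. The neutral-elimination cases — application, the three kinds of meta-application, $\tletbox$ on a neutral, and the recursors on a neutral or boxed-variable scrutinee — are the real work: each such rule demands that the type synthesised for the head have a prescribed head shape (a $\Pi$-type, a contextual type, a context- or type-quantified meta-function type, or a universe-polymorphic function type). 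The inductive hypothesis for the head returns a possibly different synthesised type $W'$ with $\lttypeq[\Phi][\Delta]i{W'}W l$, so I invoke the Injectivity of Type Constructors lemma obtained from the first instantiation to conclude that $W'$ has a weak-head normal form of the same shape with pairwise equivalent components; this lets the very same elimination rule fire in the new contexts, and the result type is corrected by the induced type equivalence. Finally, the $\tconvsub$ clause is the $\tconvtrm$ clause applied pointwise, after transporting the codomain local context $\Delta$ to the equivalent $\Delta'$ that the lemma already supplies.

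I expect the main obstacle to be exactly these neutral cases, where the synthesised type is an output of the algorithm while the elimination rules constrain its shape: the bookkeeping of ``type drift'' through the injectivity lemma at every elimination, combined with the fact that the normal-form and neutral-form clauses return only the syntactic equivalence judgment rather than the algorithmic convertibility judgment — so I must check that this weaker output is precisely what the enclosing term- and type-convertibility rules consume when they re-fire — is where the argument is fiddliest. A secondary point to watch is the universe-polymorphic case, whose synthesised type carries the non-standard level $\omega$ and whose level arguments are compared pointwise; here no transport is needed for the level-equivalence premises, since level equivalence is context-independent (\Cref{sec:dt:ulevel}). Everything else is routine propagation of the inductive hypotheses through the corresponding rules.
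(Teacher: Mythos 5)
Your proof follows essentially the same route as the paper's: mutual induction on the convertibility derivations, extending the local- and global-context equivalences under binders, and using the injectivity of type constructors (from the first instantiation of the fundamental theorems) to make the neutral-elimination rules re-fire after ``type drift.'' The paper's proof is terser but hits exactly these points, including the $\Pi$-function $\eta$ case (where you invert the equivalent normal-form type to a $\Pi$), the neutral-elimination case for application, and the reduction/normal-form interface case for $\tconvtrmnee$ (where determinacy plus the fundamental theorem identify the two weak-head normal forms).

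One remark on the point you flagged as the fiddliest. You are right to be uneasy about the $\tconvtrmnf$ and $\tconvtrmnee$ clauses returning only the declarative judgment $\lttyequiv$: as stated, that inductive hypothesis is too weak to re-fire the enclosing $\tconvtrm$ and $\tconvtypne$ rules, which demand the algorithmic judgments back. This is actually a typesetting slip in the lemma statement — the paper's own proof of those cases manifestly re-derives the algorithmic judgments $\tconvtrmnf$ and $\tconvtrmnee$ in the new contexts (and records the type drift separately for the $\tconvtrmnee$/$\tconvtrmne$ inference steps, as the $\tconvtrmne$ clause correctly does). So the resolution is to strengthen those two conclusions to the algorithmic form rather than to ``check that the weaker output suffices''; as written it does not, and the stronger invariant is what the induction actually maintains. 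With that repair your argument is exactly the paper's.
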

\begin{proof}
  We do induction on all derivations. %
  Most cases are immediate. %
  When we get under binders, we need to extend the context equivalences, in which case
  we should use the soundness lemma to obtain the well-formedness of the type that
  needs to be extended to the contexts. %

  We consider a few cases.
  \begin{itemize}[label=Case]
  \item
    \begin{mathpar}
      \inferrule
      { T \reds W \\ \ttrmreds i t w T l \\ \ttrmreds i
        {t'}{w'}T l \\ \tconvtrmnf i w{w'}W l}
      {\tconvtrm i t{t'}T l}
    \end{mathpar}
    The most important thing to notice is that by $\lttypeq[\Phi][\Delta] i{T'} T l$
    and $T \reds W$, we know that $T' \reds W'$ for some $W'$. Therefore we can prove
    this case using IH. %
    Other premises are satisfied by syntactic context equivalence lemmas. %
    
  \item
    \begin{mathpar}
      \inferrule
      {\lttypwf i S l \\ \lttyping i w {\PI{l}{l'}x{S}{T}}{l \sqcup l'} \\ \lttyping i {w'}{\PI{l}{l'}x{S}{T}}{l \sqcup l'} \\ \tconvtrm[\Psi][\Gamma, x : S \at{l}] i {\APP{w}{l}{l'}x{S}{T}{x}}{\APP{w'}{l}{l'}x{S}{T}{x}}{T}{l'}}
      {\tconvtrmnf i{w}{w'}{\PI{l}{l'}x{S}{T}}{l \sqcup l'}}
    \end{mathpar}
    In this case, we know $\lttypeq[\Phi][\Delta] i{W'}{\PI{l}{l'}x{S}{T}}{l \sqcup
      l'}$. %
    By the fundamental theorems, we know that $W'$ can only reduce to some $\Pi$
    type. %
    Since $W'$ is already normal, it is only possible for $W'$ to be some $\Pi$
    type. %
    Say $W' = {\PI{l}{l'}x{S'}{T'}}$. %
    Then by injectivity, we know $S \approx S'$ and $T \approx T'$. %
    We obtain our goal by extending the local contexts. 
    
  \item
    \begin{mathpar}
      \inferrule
      {\tconvtrmnee i \nu{\nu'}{W} l}
      {\tconvtrmnf i \nu{\nu'}V l}
    \end{mathpar}
    In this case, the type equivalence is irrelevant. %
    When hitting neutral types, we simply ignore the type $W$ inferred by
    $\tconvtrmnee i \nu{\nu'}{W} l$.

  \item
    \begin{mathpar}
      \inferrule
      {\tconvtrmne i \nu{\nu'}{T}{l} \\ T \reds W}
      {\tconvtrmnee i \nu{\nu'}W l}
    \end{mathpar}
    In this case, we simply apply IH. Then we know
    \[
      T \reds W \tand T' \reds W'
    \]
    Our goal is to show $W \approx W'$. %
    But this is immediate from the determinacy lemma of multi-step reduction and the
    fundamental theorems. %

  \item
    \begin{mathpar}
      \inferrule
      {\lpjudge i \Gamma \\ x : T \at l \in \Gamma}
      {\tconvtrmne i x{x}T l}
    \end{mathpar}
    The goal is given by the equivalence between $\Gamma$ and $\Delta$. 
    
  \item
    \begin{mathpar}
      \inferrule
      {\tyequiv[L]{l_1}{l_3}\Level  \\ \tyequiv[L]{l_2}{l_4}\Level \\ \tconvtyp i S{S'}{l_1}
        \\
        \tconvtyp[\Psi][\Gamma, x : S \at{l_1}] i{T}{T'}{l_2} \\
        \tconvtrmnee i \nu{\nu'}{\PI {l_1}{l_2} x{S''}{T''}}{l_1 \sqcup l_2} \\ \tconvtrm i s{s'} S{l_1}}
      {\tconvtrmne i{\APP \nu {l_1}{l_2} x S T s}{\APP{\nu'}{l_3}{l_4} x {S'}{T'}{s'}}{T[s/x]}{l_2}}
    \end{mathpar}
    By IH, we know from
    $\tconvtrmnee i \nu{\nu'}{\PI {l_1}{l_2} x{S''}{T''}}{l_1 \sqcup l_2}$ that there
    must be $W'$, so that
    \[
      \lttypeq i{W'}{\PI {l_1}{l_2} x{S''}{T''}}{l_1 \sqcup l_2}
    \]
    By the fundamental theorems, we know $W'$ must be some $\Pi$ types. %
    The return type is fixed so we do not have to do anything. %
  \end{itemize}
\end{proof}

From the conversion lemma, we see that both $\tconvtrmne i \nu{\nu'}T l$ and
$\tconvtrmnee i \nu{\nu'}W l$ return some equivalent types because they are in fact
inference steps. %
Therefore, it makes sense when we instantiate $\lttrmgneeq i \nu{\nu'} T l$, we hide a
syntactic equivalence judgment inside. %

For the PER laws, we see that the difficulties primarily come from the convertibility
checking of neutrals, because again they are inference steps. %
Moreover, due to their left-biased design, in general it is not true that the inferred
types can be replaced by their equivalence. %
However, since we are hiding an equivalence judgment in $\lttrmgneeq i \nu{\nu'} T l$
during instantiation, following the same principle as the conversion lemma, we are
able to erase the effect of the left bias and establish the PER laws. %

The instantiation immediately gives us the convertibility lemma once we apply the
escape lemma:
\begin{theorem}[Convertibility] $ $
  \begin{itemize}
  \item If $\lttypwf i T l$, then $\tconvtyp i T T l$. 
  \item If $\lttypeq i T{T'} l$, then $\tconvtyp i T{T'} l$. 
  \item If $\lttyping i t T l$, then $\tconvtrm i t t T l$.
  \item If $\lttyequiv i t{t'} T l$, then $\tconvtrm i t{t'} T l$.
  \end{itemize}
\end{theorem}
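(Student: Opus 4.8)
The plan is to obtain this theorem as a corollary of the fundamental theorems, by way of the second instantiation of the generic equivalence of \Cref{sec:dt:logrel} with the convertibility checking judgments exactly as set up in the preceding subsection: $\lttypgneeq i V{V'} l$ becomes $\tconvtypne i V{V'}l$, $\lttypgeq i {T}{T'} l$ becomes $\tconvtyp i T{T'} l$, $\lttrmgneeq i \nu{\nu'} T l$ becomes ``$\tconvtrmne i \nu{\nu'}{T'} l$ together with $\lttypeq i T{T'} l$ for some $T'$'', and $\lttrmgeq i t{t'} T l$ becomes $\tconvtrm i t{t'}T l$. Assuming all the laws of the generic equivalence have been discharged for this choice, the fundamental theorems turn $\lttypwf i T l$ into $\lstypwf i T l$, the escape lemma (instantiated with all identity substitutions) turns that into a derivation in the logical relation, and the subsumption law of the now-instantiated generic equivalence reads off $\tconvtyp i T T l$. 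The other three clauses follow in exactly the same way, chaining the fundamental theorem, the escape lemma, and subsumption; no new induction is needed at this final step.

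The real content, then, is verifying that the convertibility judgments satisfy every law listed in \Cref{sec:dt:klogrel} and \Cref{sec:dt:logrel}. The weak head closure, weakening, type-constructor, congruence, and neutral-congruence laws are essentially immediate, since the convertibility rules were written to mirror the equivalence rules one-for-one. The delicate laws are the type-conversion laws, the context-equivalence laws, and the PER laws. For the conversion and context-equivalence laws I would first prove the auxiliary Conversion lemma stated in the second-instantiation subsection, by a simultaneous induction over all the convertibility derivations: when the check descends under a binder the context equivalence must be extended, and there one invokes the soundness lemma for convertibility to supply well-formedness of the type being added; when a neutral check infers a type, one invokes the Injectivity of Type Constructors lemma (available from the first instantiation) so that an equivalent normal type is known to have the same head constructor, letting the recursive checks on the sub-structures go through. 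This is precisely where the left-bias of the neutral judgments is neutralized: because $\lttrmgneeq$ was instantiated to carry a hidden syntactic type equivalence, a ``wrong'' inferred annotation can always be traded for an equivalent one before recursing.

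The main obstacle I expect is the PER laws for the neutral convertibility relations, i.e.\ transitivity of $\tconvtrmne$ and $\tconvtrmnee$. Because these are inference judgments, naive transitivity fails: the inferred types of two composable derivations need only be equivalent, not syntactically equal, and the left-biased formulation does not in general permit substituting equivalent types. The remedy parallels the Conversion lemma: using the hidden equivalence judgment built into the instantiation of $\lttrmgneeq$, I would run the transitivity argument modulo type equivalence, applying injectivity at each neutral elimination to align annotations and the fundamental theorems to know that any normal type equivalent to a $\Pi$-type, a contextual type, a meta-function type, or a universe-polymorphic type is itself of that shape. Transitivity of $\tconvtrm$ additionally needs determinacy of multi-step reduction so the two terms reduce to the same weak head normal form before the normal-form comparison. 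Once the PER and Conversion laws are established, the remaining laws are routine, the instantiation succeeds, and the Convertibility theorem falls out of the escape chain described in the first paragraph.
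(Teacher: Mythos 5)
Your proposal follows the paper's proof essentially verbatim: establish the second instantiation of the generic equivalence with the convertibility judgments (hiding a syntactic type equivalence in the neutral-term relation), verify the delicate Conversion and PER laws using the Injectivity lemma from the first instantiation and determinacy of reduction, and then extract the theorem via the fundamental theorems, the escape lemma, and subsumption. Your diagnosis of the obstacles (left-biased neutral inference, transitivity of $\tconvtrmne$/$\tconvtrmnee$) and their remedies matches the paper's own discussion exactly.
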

In particular, we see that the convertibility checking algorithm is both sound and
complete w.r.t. the syntactic equivalence judgments. %
The decidability of type checking requires the following lemma by attempting to relate
two reflexively convertible types or terms. 

\begin{lemma}[Decidability] $ $
  \begin{itemize}
  \item if $\tconvtyp[\Phi][\Delta] i T{T} l$, $\tconvtyp i {T'}{T'} l$,
    $\gequiv\Phi\Psi$ and $\lpequiv[\Phi]{i}\Delta\Gamma$, then whether
    $\tconvtyp[\Phi][\Delta] i T{T'} l$ is decidable.
  \item if $\tconvtypnf[\Phi][\Delta] i W{W} l$, $\tconvtypnf i{W'}{W'} l$,
    $\gequiv\Phi\Psi$ and $\lpequiv[\Phi]{i}\Delta\Gamma$, then whether
    $\tconvtypnf[\Phi][\Delta] i W{W'} l$ is decidable.
  \item if $\tconvtypne[\Phi][\Delta] i V{V}l$, $\tconvtypne i {V'}{V'}l$,
    $\gequiv\Phi\Psi$ and $\lpequiv[\Phi]{i}\Delta\Gamma$, then whether
    $\tconvtypne[\Phi][\Delta] i V{V'}l$ is decidable. %
  \item if $\tconvctx[\Phi] i \Gamma \Gamma$, $\tconvctx i \Delta \Delta$ and
    $\gequiv\Phi\Psi$, then whether $\tconvctx[\Phi] i \Gamma \Delta$ is decidable. 
  \item if $\tconvtrm[\Phi][\Delta] i t{t}T l$, $\tconvtrm i{t'}{t'}{T} l$,
    $\gequiv\Phi\Psi$ and $\lpequiv[\Phi]{i}\Delta\Gamma$, then whether
    $\tconvtrm[\Phi][\Delta] i t{t'}{T} l$ is decidable.
  \item if $\tconvtrmnf[\Phi][\Delta] i w{w}W l$, $\tconvtrmnf i {w'}{w'}{W} l$,
    $\gequiv\Phi\Psi$ and $\lpequiv[\Phi]{i}\Delta\Gamma$, then whether
    $\lttyequiv[\Phi][\Delta] i w{w'}{W} l$ is decidable.
  \item if $\tconvtrmne[\Phi][\Delta] i \nu{\nu}T l$,
    $\tconvtrmne i {\nu'}{\nu'}{T'} l$, $\gequiv\Phi\Psi$ and
    $\lpequiv[\Phi]{i}\Delta\Gamma$, then whether
    $\tconvtrmne[\Phi][\Delta] i \nu{\nu'}{T''} l$ for some $T''$ is decidable.
  \item if $\tconvtrmnee[\Phi][\Delta] i \nu{\nu}W l$,
    $\tconvtrmnee i {\nu'}{\nu'}{W'} l$, $\gequiv\Phi\Psi$ and
    $\lpequiv[\Phi]{i}\Delta\Gamma$, then whether
    $\lttyequiv[\Phi][\Delta] i \nu{\nu'}{W''} l$ is decidable.
  \item if $\tconvsub[\Phi][\Gamma'] i{\delta}{\delta}{\Delta}$,
    $\tconvsub i{\delta'}{\delta'}{\Delta}$, $\gequiv\Phi\Psi$ and
    $\lpequiv[\Phi]{i}{\Gamma'}\Gamma$, then whether
    $\tconvsub[\Phi][\Gamma'] i{\delta}{\delta'}{\Delta}$ is decidable. %
  \end{itemize}
\end{lemma}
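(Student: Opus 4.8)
The plan is to prove the nine statements simultaneously by a mutual induction on the pair of given reflexive convertibility derivations (for the two sides, e.g. $\tconvtyp[\Phi][\Delta] i T T l$ together with $\tconvtyp i {T'}{T'} l$, and analogously for the other judgments). The point of carrying reflexive derivations rather than mere well-formedness premises is that the convertibility-checking rules are syntax-directed up to the single choice between a neutral rule and a structural rule, so a reflexive derivation for a type or term already records its weak head normal form and its immediate structure; inverting it hands us reflexive derivations for the sub-structures, on which the induction descends. Determinacy of multi-step reduction guarantees that the weak head normal form appearing in both given derivations is the unique one, which is what makes the case analysis coherent; and Soundness of convertibility checking ensures the inputs are well-typed, hence normalizing by the Fundamental theorems.

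For each pair of inputs I would first compare the heads of the two recorded weak head normal forms. If they are formed by different constructors the algorithm reports failure, and this negative answer is correct: a successful check would, by Soundness, give $\lttypeq i T{T'} l$, which by the Fundamental theorems and Injectivity of Type Constructors (from the first instantiation) is impossible for two distinct normal forms. If the heads agree, the matching structural rule fires and we recurse on the sub-structures. Here the Conversion lemma just established is the workhorse: it lets us transport the recursive calls into the equivalent local context supplied by $\lpequiv[\Phi]{i}\Delta\Gamma$ and, for the term judgments, under the equivalent type, after the Fundamental theorems/injectivity pin down the shape of that equivalent type (for instance, that a normal type equivalent to $\PI{l}{l'}x S T$ is again a $\Pi$ type, so the $\Pi$-normal-form rule is forced on it too). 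The neutral cases are handled by first appealing to the induction hypothesis for the \emph{inference} sub-calls, $\tconvtrmne i \nu{\nu}T l$ against $\tconvtrmne i {\nu'}{\nu'}{T'} l$ — which is decidable and, by the Conversion lemma, returns equivalent inferred types on success — and then deciding equivalence of the inferred types by another instance of the decidability statement for types; the type-directed term rule then recurses against the common weak head normal form of the type, whose existence is again guaranteed by the Fundamental theorems.

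The main obstacle, exactly as in Abel et al., is the left-biased formulation of the neutral rules: the type inferred while checking two neutral terms is not literally either annotation type, so one must know that the required sub-derivations survive when the inferred type is replaced by an equivalent one. I expect to dispatch this the same way the Conversion lemma does — re-deriving the component judgments via Injectivity of Type Constructors — so no genuinely new phenomenon appears; the bulk of the work is uniform bookkeeping across all the rules. The only places needing extra care are the recursive principles for code, where the induction must account for the fixed motives and branches (compared through $C_A$) and for the two blocked shapes of the scrutinee, a neutral versus a $\tbox$'ed global variable — in the latter the comparison of the global variable is plain syntactic equality and hence trivially decidable, and in both the indexing universe levels and local contexts are compared by the (decidable) level-equivalence check and by the context rules already covered. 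Assembling the nine statements and feeding in the reflexive convertibility derivations produced by the Convertibility theorem then yields decidability of convertibility checking between arbitrary well-typed types and terms, and with it decidability of type checking for \delamlang.
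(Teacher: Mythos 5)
Your proposal is correct and takes essentially the same approach as the paper: mutual induction on the reflexive convertibility derivations (inducting on one side while inverting the other), using determinacy of reduction to pin down the unique weak head normal forms, rejecting mismatched heads via soundness together with injectivity from the first instantiation, and invoking the Conversion lemma to transport recursive calls across equivalent contexts and types, with the neutral cases handled by first deciding the inference sub-calls and then comparing the returned types.
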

\begin{proof}
  We do a mutual induction on the first derivations and then invert the second ones. %
  We can reject most cases when they have different root derivations. %
  We consider a few cases.
  \begin{itemize}[label=Case]
  \item
    \begin{mathpar}
      \inferrule
      {\ttypreds[\Phi][\Delta] i T W l \\\\ \tconvtypnf i W{W} l}
      {\tconvtyp[\Phi][\Delta] i T{T} l}

      \inferrule
      {\ttypreds i{T'}{W'} l \\\\ \tconvtypnf i{W'}{W'} l}
      {\tconvtyp i T{T'} l}
    \end{mathpar}
    In this case, we apply IH to decide whether $W$ and $W'$ are convertible. %

  \item
    \begin{mathpar}
      \inferrule
      {\tconvtyp[\Phi][\Delta] i S{S} l \\ \tconvtyp[\Phi][\Delta, x : S \at l] i T{T}{l'}}
      {\tconvtypnf[\Phi][\Delta] i{\PI l{l'}x{S}{T}}{\PI l{l'}x{S}{T}}{l \sqcup l'}}

      \inferrule
      {\tconvtyp i{S'}{S'} l \\ \tconvtyp[\Psi][\Gamma, x : S' \at l] i{T'}{T'}{l'}}
      {\tconvtypnf i{\PI l{l'}x{S'}{T'}}{\PI l{l'}x{S'}{T'}}{l \sqcup l'}}
    \end{mathpar}
    We can decide whether $S$ and $S'$ are convertible by IH. %
    When we decide $T$ and $T'$, we must extend the equivalent local contexts. %
    
  \item
    \begin{mathpar}
      \inferrule
      {T \reds W \\ \ttrmreds[\Phi][\Delta] i t w T l \\\\  \tconvtrmnf[\Phi][\Delta] i w{w}W l}
      {\tconvtrm[\Phi][\Delta] i t{t}T l}

      \inferrule
      {T' \reds W' \\ \ttrmreds i{t'}{w'}{T'} l \\\\  \tconvtrmnf i {w'}{w'}{W'} l}
      {\tconvtrm i {t'}{t'}{T} l}
    \end{mathpar}
    By fundamental theorems, we know $W \approx W'$. %
    Then by IH, we can decide whether $w$ and $w'$ are convertible. %
    
  \item
    \begin{mathpar}
      \inferrule
      {\lttypwf[\Phi][\Delta] i S l \\ \lttyping[\Phi][\Delta] i w
        {\PI{l}{l'}x{S}{T}}{l \sqcup l'} \\
        \tconvtrm[\Phi][\Delta, x : S \at{l}] i {\APP{w}{l}{l'}x{S}{T}{x}}{\APP{w}{l}{l'}x{S}{T}{x}}{T}{l'}}
      {\tconvtrmnf[\Phi][\Delta] i{w}{w}{\PI{l}{l'}x{S}{T}}{l \sqcup l'}}

      \inferrule
      {\lttypwf i{S} l \\ \lttyping i{w'}{\PI{l}{l'}x{S}{T}}{l \sqcup l'} \\
        \tconvtrm[\Psi][\Gamma, x : S \at{l}] i {\APP{w'}{l}{l'}x{S}{T}{x}}{\APP{w'}{l}{l'}x{S}{T}{x}}{T}{l'}}
      {\tconvtrmnf i{w'}{w'}{\PI{l}{l'}x{S}{T}}{l \sqcup l'}}
    \end{mathpar}
    Again, it is quite immediate by IH. %
    
  \item
    \begin{mathpar}
      \inferrule
      {\tconvtrmnee[\Phi][\Delta] i \nu{\nu}{W} l}
      {\tconvtrmnf[\Phi][\Delta] i \nu{\nu}V l}

      \inferrule
      {\tconvtrmnee i {\nu'}{\nu'}{W'} l}
      {\tconvtrmnf i {\nu'}{\nu'}{V} l}
    \end{mathpar}
    By IH, we know $\tconvtrmnee[\Phi][\Delta] i \nu{\nu'}{W''} l$ for some $W''$. 

  \item
    \begin{mathpar}
      \inferrule
      {\tconvtrmne[\Phi][\Delta] i \nu{\nu}{T}{l} \\ T \reds W}
      {\tconvtrmnee[\Phi][\Delta] i \nu{\nu}W l}

      \inferrule
      {\tconvtrmne i {\nu'}{\nu'}{T'}{l} \\ T' \reds W'}
      {\tconvtrmnee i {\nu'}{\nu'}{W'} l}
    \end{mathpar}
    By IH, we have $\tconvtrmne i {\nu'}{\nu'}{T''}{l}$ for some $T''$. %
    $T''$ will reduce to some normal form by the fundamental theorems. %
  \end{itemize}
\end{proof}

\begin{theorem}[Decidability of Convertibility] $ $
  \begin{itemize}
  \item If $\lttypwf i T l$ and $\lttypwf i{T'} l$, then whether $\tconvtyp i T{T'} l$
    is decidable.
  \item If $\lttyping i t T l$ and $\lttyping i{t'} T l$, then whether
    $\tconvtrm i t{t'}T l$ is decidable.
  \end{itemize}
\end{theorem}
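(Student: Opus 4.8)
The plan is to obtain this as a direct corollary of two earlier results: the \textbf{Convertibility} theorem (completeness of the checking algorithm w.r.t.\ the syntactic equivalence judgments) and the immediately preceding \textbf{Decidability} lemma, which — given reflexive convertibility derivations for the two subjects and for the ambient global and local contexts — decides whether the two subjects are convertible. For the first item I would proceed as follows. From $\lttypwf i T l$ and $\lttypwf i{T'} l$ the \textbf{Convertibility} theorem supplies the reflexive algorithmic derivations $\tconvtyp i T T l$ and $\tconvtyp i{T'}{T'} l$. Presupposition applied to $\lttypwf i T l$ gives $\lpjudge{\typeof i}\Gamma$ and $\judge[L]\Psi$; since the checking judgments require $\compt i$, we have $i \in \{\proglevel,\metalevel\}$ and hence $\typeof i = i$, so the \textbf{Reflexivity} lemma yields $\lpequiv i\Gamma\Gamma$, and a straightforward induction on $\judge[L]\Psi$ (using reflexivity of the type- and context-equivalence judgments at each binding) yields $\gequiv\Psi\Psi$. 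Feeding these four facts to the \textbf{Decidability} lemma with $\Phi := \Psi$ and $\Delta := \Gamma$ shows that $\tconvtyp i T{T'} l$ is decidable.

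The second item is handled in exactly the same way, with the term-level components throughout: $\lttyping i t T l$ and $\lttyping i{t'} T l$ give $\tconvtrm i t t T l$ and $\tconvtrm i{t'}{t'} T l$ by the \textbf{Convertibility} theorem, presupposition supplies the ambient well-formedness judgments (if $t$ happens to be typed at a lower layer, one first applies \Cref{lem:dt:lift} to move to a computable layer), and the term clause of the \textbf{Decidability} lemma then concludes. Composing this with the \textbf{Soundness} lemma for the checking algorithm in addition yields decidability of the syntactic equivalence judgments $\lttypeq i T{T'} l$ and $\lttyequiv i t{t'} T l$ on well-formed inputs, and hence decidable type checking for \delamlang.

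I do not expect this final step to present any real difficulty: the substance — termination of weak head reduction (packaged into the reducibility predicates, and thus into the reflexive convertibility derivations via the fundamental theorems), the syntax-directedness of the checking rules modulo reduction, and the bookkeeping around the left-biased neutral inference steps — has already been discharged in the \textbf{Decidability} lemma and in the fundamental theorems. The only points needing care are layer bookkeeping, namely checking that the index $i$ is computable so that $\typeof i = i$ and the layer annotations on all hypotheses of the \textbf{Decidability} lemma align, and producing the reflexive convertibility derivations for the global and local contexts themselves, not merely for $T$, $T'$ (resp.\ $t$, $t'$).
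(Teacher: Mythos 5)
Your proposal is correct and follows essentially the same two-step argument as the paper's own proof: first obtain reflexive convertibility derivations for $T$, $T'$ (resp.\ $t$, $t'$) via the Convertibility theorem (the second instantiation of the fundamental theorems), then apply the preceding Decidability lemma instantiated with $\Phi := \Psi$ and $\Delta := \Gamma$. The only substantive additions you make are to spell out how the context-equivalence hypotheses $\gequiv\Psi\Psi$ and $\lpequiv{i}\Gamma\Gamma$ of the Decidability lemma are discharged (presupposition plus reflexivity), which the paper's one-line proof leaves implicit; the aside about lifting a lower-layer $i$ is superfluous, since $\tconvtyp i T{T'} l$ is only defined under $\compt i$ and thus the conclusion already constrains $i \in \{\proglevel, \metalevel\}$, but it causes no harm.
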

\begin{proof}
  First we use the fundamental theorems from the second instantiation to show that
  both types (or terms, resp.) are reflexively convertible. %
  Then we use the decidability lemma above. %
\end{proof}

At this point, we have justified the decidability of convertibility checking of
\delamlang and therefore conclude our investigations. %



\end{document}